\newtheorem{theorem}{Theorem}[section]
\newtheorem{proposition}[theorem]{Proposition}
\newtheorem{lemma}[theorem]{Lemma}
\newtheorem{corollary}[theorem]{Corollary}
\newtheorem{definition}[theorem]{Definition}
\newtheorem{remark}[theorem]{Remark}
\newtheorem{claim}[theorem]{Claim}
\numberwithin{equation}{section} 
\newcommand \Ccal{\mathcal{C}}
\newcommand \Kcal {\mathcal K}
\newcommand \Hcal {\mathcal H}
\newcommand \Lcal {\mathcal L}
\newcommand \Ecal {\mathcal E}
\newcommand \opDirac {\mathfrak{D}} 
\newcommand \opCurve {\mathfrak{R}}
\newcommand \delb {\overline {\del}}
\newcommand \gb {\overline g}
\newcommand \Hb {\overline{H}}
\newcommand \Phib{\overline{\Phi}}
\newcommand \Psib{\overline{\Psi}}
\newcommand \sigmab{\overline{\sigma}}
\newcommand \betab{\overline{\beta}}
\newcommand \zetab{\overline{\zeta}}
\newcommand \gch{\check{g}}
\newcommand \vch{\check{v}}
\newcommand \del \partial
\newcommand \delu {\uline{\del}}
\newcommand \Hu {\uline{H}}
\newcommand \thetau{\uline{\theta}}
\newcommand \gu{\uline{g}}
\newcommand \Psiu{\uline{\Psi}}
\newcommand \Phiu{\uline{\Phi}}
\newcommand \Abf { {\bf A}}
\newcommand \Ebf { {\bf E}}
\newcommand \source{\mathrm{sour}}
\newcommand \RR{\mathbb{R}}
\newcommand \eps{\varepsilon}
\newcommand {\vep}{\varepsilon}
\newcommand {\dels}{\slashed{\del}}
\newcommand {\ebf}{ {\bf e}}
\newcommand{\diff}{\mathrm{d}}
\def\hlinew#1{%
\noalign{\ifnum0=`} \fi\hrule \@height #1 \futurelet
\reserved@a\@xhline}
\newcommand \bei {\begin{itemize}}
\newcommand \eei {\end{itemize}}
\newcommand \be {\begin{equation}}
\newcommand \bel {\be\label}
\newcommand \ee {\end{equation}}
\newcommand \bse {\begin{subequations}}
\newcommand \ese {\end{subequations}}
\newcommand \la \langle
\newcommand \ra \rangle
\newcommand \nablas{\nabla^\ourS}
\newcommand \ih{\widehat{\imath}}
\newcommand \jh{\widehat{\jmath}}
\newcommand \kh{\widehat{k}}
\newcommand \lh{\widehat{l}}
\newcommand \Hessian {\textbf{Hess}}
\newcommand \Riem {\mathrm{\bf Riem}}
\newcommand \Ric {\mathrm{\bf Ric}}
\newcommand \ourS {\mathrm{\bf S}}
\newcommand \Fbb {\mathbb F}
\newcommand \Pbb {\mathbb P}
\newcommand \Qbb {\mathbb Q}
\newcommand \Sbb {\mathbb S} 
\newcommand \delN {\del^{\mathcal{N}}}
\newcommand \delsN {\slashed \del^{\mathcal{N}}}
\newcommand \Mscr       {\mathscr M} 
\newcommand \MME {\Mscr^{\mathcal{EM}}}  
\newcommand \E {\mathcal{E}} % for Euclidean
\newcommand \ME {\mathcal{EM}} % for merging- Euclidean
\newcommand \EM {\mathcal{EM}} % for merging- Euclidean
\renewcommand \H {\mathcal H} % for hyperbolpidal
\newcommand \M {\mathcal M} % for merging
\newcommand \Mcal {\mathcal M} % fusioned to \M
\newcommand \N {\mathcal N} % for null
\newcommand \delts  \delsN
\newcommand \Ncal {\mathcal N}
\newcommand \near {\textbf{near}}
\newcommand \xsans {\widetilde x}
\newcommand \rsans {\widetilde r}
\newcommand \Phisans {\widetilde \Phi}
\newcommand \Psisans {\widetilde \Psi}
\newcommand \delsans {\widetilde \del}
\newcommand \etasans {\widetilde \eta}
\newcommand \gsans {{\widetilde g}}
\newcommand \tsans {{\widetilde t}}
\newcommand \Hsans {\widetilde H}
\newcommand \Zsans {\widetilde Z}
\newcommand \Tsans {\widetilde T}
\newcommand \xavec {x}
\newcommand \ravec {r}
\newcommand \delavec {\del}
\newcommand \tavec {t}
\newcommand \Havec {H}
\newcommand \lapsb {\overline{l}}
\newcommand \ord {\mathrm{ord}}
\newcommand \rank{\mathrm{rank}}
\newcommand \gref{{g_\mathrm{r}}}
\newcommand \Href{{H_{\mathrm{r}}}}
\newcommand \Hreff{{h_{\mathrm{r}}}}
\newcommand \greft{\widetilde{g_{\mathrm{r}}}}
\newcommand \Hreft{\widetilde{H_{\mathrm{r}}}}
\newcommand \Lsans{\widetilde{L}}
\newcommand \Omegasans{\widetilde{\Omega}}
\newcommand \Wsans{\widetilde{W}}
\newcommand \Gammasans{\widetilde{\Gamma}}
\newcommand \epss{{\eps^{\star}}}
\newcommand \Mks{{\mathrm{Ms}}}
\newcommand \Kscr{\mathscr{K}}
\newcommand \Ssans{\widetilde{S}}
\newcommand \Scal{\mathcal{S}}
\newcommand \Rbb{\mathbb{R}}
\newcommand \Hcom{{\bf C}^+}
\newcommand \Hwave{{\bf W}}
\newcommand \HDirac{{\bf D}}
\newcommand \Cubic{{\bf C}}
\newcommand \Quart{{\bf Qt}}
\newcommand \epsm{{\eps_s}}
\newcommand \Kref{{K_{\mathrm{r}}}}
\newcommand \ourD{\mathrm{D}}
\newcommand \delS{\del^{\Scal}}
\newcommand \delSs{\slashed{\del}^{\Scal}}
\newcommand \PhiS{{\Phi^{\Scal}}}
\newcommand \PsiS{{\Psi^{\Scal}}}
\newcommand \gScal{{g_{\Scal}}}
\newcommand \HScal{{H_{\Scal}}}
\newcommand \Phihat{{\slashed\Phi}}
\newcommand \Psihat{{\slashed\Psi}}
\begin{document}

\title{The Global Nonlinear Stability of Minkowski Spacetime 
\\
with Self-Gravitating Massive Dirac Fields}

\author{Philippe G. LeFloch\footnote{Laboratoire Jacques-Louis Lions, Sorbonne Universit\'e \& Centre National de la Recherche Scientifique, 4 Place Jussieu, 75252 Paris, France. 
\newline 
Email: {\tt contact@philippelefloch.org}, {\tt weidong.zhang@sorbonne-universite.fr}} \hskip.01cm, 
Yue Ma\footnote{School of Mathematics and Statistics, Xi'an Jiaotong University, Xi'an, Shaanxi 710049, P.R. China.
\newline 
E-mail: {\tt yuemath@xjtu.edu.cn, zwd13892650621@stu.xjtu.edu.cn}.
\newline
{\it Keywords and Phrases.} Einstein equation; massive Dirac field; nonlinear stability; gauge-invariance; light-bending coordinates; Euclidean--hyperboloidal method.} 
\hskip.01cm, 
and Weidong Zhang$^{\ast, \dag}$
}

\date{October 2025}

\maketitle

\begin{abstract}
We consider the Einstein-Dirac system for a massive field, which describes the evolution of self-gravi\-tating massive spinor fields, and we investigate the corresponding \emph{global evolution problem}, when the initial data set is sufficiently close to data describing a spacelike, asymptotically Euclidean slice of the (vacuum) Minkowski spacetime. We establish the \emph{gauge-invariant nonlinear stability} of such fields, namely the existence of a globally hyperbolic development, which remains asymptotic to Minkowski spacetime in future timelike, null, and spacelike directions. Previous results on this problem have been limited to the Einstein-Dirac system in the massless case. Our analysis follows the asymptotically hyperboloidal-Euclidean framework introduced by LeFloch and Y.~Ma for the massive Klein-Gordon-Einstein system. The structure specific to spinor fields and the Dirac equation necessitates significantly new elements in the proof, which are developed here. In contrast with prior approaches in the literature, our treatment of spinor fields and the Dirac equation on curved spacetimes is \emph{fully gauge-invariant,} relying on the formalism of Lorentz Clifford algebras and principal fiber bundles. The core of our analysis is carried out with the spacetime metric expressed in \emph{light-bending wave coordinates}, as we call them. This leads us to the study of a global existence problem for a coupled system of second-order wave equations with constraints (for the spacetime metric) and first-order Klein-Gordon-type equation for the spinor field. We derive $L^2$ estimates for the Dirac equation and its coupling with the Einstein equations, along with $L^\infty$ pointwise estimates. New Sobolev inequalities are introduced to handle spinor fields  in a gauge-invariant manner in the hyperboloidal–Euclidean foliation. The nonlinear coupling between the massive Dirac equation and the Einstein equations is thoroughly investigated, and we establish a hierarchy of bootstrap estimates, which carefully distinguish between the order of differentiation with respect to translations, spatial rotations, and Lorentz boosts.
\end{abstract}

\vfill

%================================================================================

\clearpage 

%------------------------------------------------- 

{\small
 
\setcounter{secnumdepth}{2} 
\setcounter{tocdepth}{2}
\tableofcontents

} 

%=============================================================================
%=============================================================================

\clearpage 

%==================================================================================

\section{Introduction}
\label{section=N1}

\subsection{Main objective} 
\label{section=1-1}

\paragraph{Aim of this Monograph.}

We investigate here Einstein's field equation of general relativity, coupled to a massive spinor field satisfying the Dirac equation. We focus on the \emph{global evolution problem} from initial data given by small perturbations of a spacelike, asymptotically Euclidean slice of the (vacuum) Minkowski spacetime. Our main result establishes the nonlinear stability of self-gravitating, massive spinor fields, showing that the coupled Einstein-Dirac system admits a globally hyperbolic development, which remains asymptotically Euclidean in all causal directions---timelike, null, and spacelike.

Whereas earlier works in the literature address only the \emph{massless} case, the presence of mass entails novel analytical challenges, in particular for controlling the long-time behavior of solutions and the intricate interactions between geometry and matter. To overcome these difficulties, we extend the \emph{asymptotically hyperboloidal--Euclidean method} developed by two of the authors for the Einstein--Klein--Gordon system; cf.~LeFloch and Ma~\cite{PLF-YM-CRAS}--\cite{PLF-YM-PDE}. Independently, the global evolution problem for the Einstein--Klein--Gordon system was also solved by Ionescu and Pausader~\cite{IP,IP-two,IP3} by introducing a Fourier-based method.  Wang~\cite{Wang} proposed a geometric approach for solutions having a Schwarzschild exterior. More recently, Chen and Zhou \cite{ChenZhou} studied Einstein-Klein-Gordon spacetimes with $U(1) \times \RR$ symmetry. 

The method introduced in the present monograph  incorporates the algebraic and differential structures specific to spinor fields and exploits a formulation in terms of Lorentzian Clifford algebras and principal fiber bundles, thereby ensuring a fully gauge-invariant treatment of the spinor field throughout the analysis. 

%--------------------------------------------------------  

\paragraph{Main techniques.}

A key element of our method in~\cite{PLF-YM-PDE} is the introduction of a geometrically motivated condition---referred to as the \emph{light-bending wave coordinates}---which is obtained here by a choice of coordinates
motivated by the earlier work of Kauffman and Lindblad~\cite{KauffmanLindblad}. In these suitably defined wave coordinates, the Einstein-Dirac system takes the form of a coupled system of second-order nonlinear wave equations (governing the metric components) and a first-order Dirac equation (governing the spinor field), together with algebraic and differential constraints. Within this framework, we handle a rather general class of spacetime metrics and analyze the decay and dispersion of solutions along the hyperboloidal--Euclidean foliation. In short, we establish a comprehensive set of energy and pointwise estimates for the spinor and metric components, adapted to the gauge-invariant structure of Dirac fields. In particular, we derive new Sobolev-type inequalities specifically adapted to the spinorial setting and compatible with the asymptotic geometry of the foliation. A bootstrap argument is then developed, in which we carefully track the behavior of the differentiated spinor field under spacetime symmetries---including translations, rotations, and Lorentz boosts---thereby controlling a full hierarchy of derivatives. The outcome is a geometric and analytic framework for self-gravitating fields in curved spacetimes, yielding the first global existence and stability theorem for massive spinor fields coupled to gravity.

%------------------------------------------------------------------------------------------------

\paragraph{Physical motivations.}

The Einstein-Dirac system arises naturally in the context of general relativity and quantum field theory, where it models the gravitational interaction of spin-$\tfrac{1}{2}$ particles---known as \emph{fermions}. These include physically fundamental entities such as electrons, protons, and neutrons, all of which are described by the Dirac equation, a relativistic wave equation incorporating spin and mass. When coupled to gravity, such fermionic fields influence and are influenced by the curvature of spacetime, giving rise to a highly nonlinear and geometrically rich system.

Among spinor fields, those with non-zero mass are of particular physical relevance. While massless spinors (e.g., neutrinos in idealized models) travel along null geodesics and interact weakly with geometry, massive fermions are central to both particle physics and astrophysical phenomena. Their gravitational self-interaction can lead to significant nonlinear effects, including spacetime collapse, which are crucial in the modeling of compact objects such as neutron stars. Understanding the global behavior of these systems is therefore essential for formulating a coherent theory of quantum matter on curved backgrounds and for exploring the interplay between geometry and quantum field theory.

%---------------------------------------------------- 

\paragraph{Related works.}

We will not attempt to review here the vast literature on the subject. First of all, the stability of Minkowski spacetime for the {\sl vacuum} Einstein equations was established geometrically by Christodoulou and Klainerman~\cite{CK} (and then reviewed in Bourguignon \cite{Bourguignon}), while an alternative proof based on a formulation in wave coordinates was given later on by Lindblad and Rodnianski~\cite{LR1,LR2}. Solutions with lower decay at spacelike infinity were constructed by Bieri~\cite{Bieri} (and \cite{BieriZipser}), while the most recent contributions include the work of Hintz and Vasy~\cite{HintzVasy1,HintzVasy2} and Shen~\cite{Shen} (minimal decay assumptions). 

On the other hand, dealing with \textsl{self-gravitating matter fields} has been undertaken only in recent years. The stability of self-gravitating Klein--Gordon fields was proven independently by LeFloch--Ma~\cite{PLF-YM-PDE} and by Ionescu and Pausader~\cite{IP3}, the latter work being based on the notion of spacetime resonances and Fourier-based techniques. Recent work on other types of matter fields includes, among others, contributions by Bigorgne~\cite{Bigorgne}, Bigorgne et al.~\cite{Bigorgne2}, Fajman et al.~\cite{FJS3}, and Lindblad et al.~\cite{LTay}. More recently, the \emph{massless} Dirac equation was studied by Chen~\cite{ChenXuantao}, using a coordinate-based formulation of the Dirac equation. 

Furthermore, as far as the Euclidean--hyperboloidal method is concerned, we point out the contribution by Dong, LeFloch, and Wyatt~\cite{DIPP} on the global evolution of the U(1) Higgs boson model. Extensive, now classical, work is also available on the Dirac equation on a fixed spacetime; cf.~for instance Bachelot~\cite{Bachelot88}, which contains an analysis of the Cauchy problem. For further results on the Dirac equations, we refer the reader to Bejenaru--Herr~\cite{Bejenaru-Herr-2017} (scattering for Dirac-Klein-Gordon), Wang \cite{X.Wang-2015} (critical case), Cai-Dong-Li-Zhao \cite{Cai-Dong-2024} and Dong-Li-Yuan \cite{Dong-Li-Zhao-2023} (large data), Dong-Li \cite{Dong-Li-2022}, Dong-Wyatt \cite{DW} and Dong-Li-Ma-Yuan~\cite{Dong-Li-Ma-Yuan-2024} (lower dimension), Dong-Li-Yuan \cite{Dong-Li-Yuan-2023} (uniform energy bound), Ge et al. \cite{Ge-Jiang-Wang-Zhang-Zhong} (Reissner-Nordstrom-de Sitter geometry), Jia-Li~\cite{Jia-Li-2024} (Dirac radiation field), Ma-Zhang \cite{Ma-Zhang-2022} (Schwarzschild geometry), R\"oken \cite{Roken-2019} (Kerr geometry), Wang-Zhang \cite{Wang-Zhang-2018} (Kerr-Newman-AdS geometry), Zhang-Zhang \cite{Zhang-Zhang-2024} (Kerr-Newman geometry), and Zhao-Wu \cite{Zhao-Wu-2025} (characteristic initial value problem). 

%------------------------------------------------------------------------------------------------------------------------------------------

\subsection{Formulation of the problem}

\paragraph{Spinor fields.}

Let us briefly outline the spinorial formalism adopted in this work, while precise definitions are provided in Part~\ref{part-one}, below. We consider a time-oriented, orientable Lorentzian manifold $(\mathcal{M}, g)$ of signature $(-,+,+,+)$, which is globally diffeomorphic to $\RR^+ \times \RR^3$ and admits a foliation by spacelike hypersurfaces $\Sigma_t$ diffeomorphic to $\RR^3$. The manifold $\mathcal{M}$ carries a spin structure $\mathrm{Spin}^+(\Mcal)$, and we introduce the spinor bundle 
\be
\ourS(\mathcal{M}) := \mathrm{Spin}^+(\mathcal{M}) \times_{\kappa} \mathbb{C}^4,
\ee
associated with the Dirac representation $\kappa$ of the Lorentz spin group.

The covariant derivative operator associated with the Lorentz metric $g$ is denoted by $\nablas$ 
( but later in the following discussion, for simplicity of notation it is also denoted by $\nabla$ when there is no risk of ambiguity). A spinor field is then a global section $\Psi: \mathcal{M} \to \ourS(\mathcal{M})$, and the Dirac operator is defined by
\begin{equation}
\opDirac \Psi := g^{\mu\nu} \del_{\mu}\cdot\nablas_\nu \Psi,
\end{equation}
where $\del_{\mu}\cdot$ denotes Clifford multiplication by the basis vectors. More explicitly, in a local orthonormal frame $\{e_{\alpha}\}$ (referred to as a ``tetrad'' or a local gauge), the spinorial covariant derivative acting on the ``coordinate'' of $\Psi$ with respect to $\{e_{\alpha}\}$ is given by (cf. \cite{Hamilton-2017})
\begin{equation} \label{eq3-21-11-2024-Z}
\aligned
\nablas_{e_{\alpha}} \psi =&  \diff \psi(e_{\alpha}) - \tfrac{1}{4} \,\omega_{\mu ab} \,\Sigma^{ab} \psi,
\endaligned
\end{equation}
where $\omega_{\mu ab} = g(e_a, \nabla_\mu e_b)$ is the spin connection associated with the Levi-Civita connection $\nabla$ of the metric $g$, and $\Sigma^{ab}$ are the generators of the spin representation. Upon fixing a local gauge, we may equivalently express the Dirac equation in terms of the standard $4 \times 4$ gamma matrices $\gamma^\mu$, with the covariant derivative defined so as to respect the spinor structure. However, throughout our analysis, we choose to avoid doing so and develop our geometric analysis in a fully gauge-invariant framework.

%--------------------------------------------------------------------------

\paragraph{The Einstein-Dirac system for a massive field.}

We are interested in the Einstein equations coupled to a massive spinor field $\Psi$ of mass $M \geq 0$. This system takes the form
\begin{equation}
\label{main system}
\begin{aligned}
R_{\mu\nu} - \tfrac{1}{2} R g_{\mu\nu} & = T[\Psi]_{\mu\nu}, 
\\
\opDirac \Psi + \mathrm{i}M \Psi & = 0,
\end{aligned}
\end{equation}
where the energy--momentum tensor $T_{\mu\nu}$ depends on the spinor field and its derivatives, and is given by 
\begin{equation}\label{equa-tensorTmunu-0}
\aligned
T[\Psi]_{\mu\nu} 
=& 
\frac{\mathrm{i}}{4} \Big(\la\Psi,\del_\mu \cdot\nablas_\nu \Psi \ra_{\ourD}
+ 
\la\Psi,\del_{\nu} \cdot\nablas_{\mu} \Psi \ra_{\ourD} \Big)
- 
\frac{\mathrm{i}}{4} \Big(\la \del_\mu \cdot\nablas_\nu \Psi,\Psi\ra_{\ourD}
+
\la \del_{\nu} \cdot\nablas_{\mu} \Psi,\Psi\ra_{\ourD}
\Big).
\endaligned
\end{equation}
 It is important to note that this system is fully covariant, both under spacetime diffeomorphisms and under local Lorentz transformations. Moreover, the conservation law
\be
\nabla^\mu T_{\mu\nu} = 0
\ee
follows directly from the contracted Bianchi identities together with the Dirac equation.  

%------------------------------------------------------------------ 

\paragraph{Global Cauchy developments.} 

We are interested in the Cauchy problem when a suitable initial data set is prescribed on a spacelike hypersurface. Namely, we prescribe 
\[
\bigl(\Sigma,\overline{g}, \overline{k}, \overline{\Psi} \bigr),
\] 
consisting of a three-dimensional Riemannian manifold $(\Sigma,\overline g)$, endowed with a symmetric two-tensor field $\overline{k} = (\overline{k}_{ij})$, and a spinor field $\overline{\Psi}$. The tensor $\overline{k}$ stands for the second fundamental form of the initial slice. The initial data set must satisfy the Einstein constraint equations on $\Sigma$, namely the Hamiltonian and momentum constraints obtained by projecting the Einstein equations along and orthogonal to~$\Sigma$, together with the compatibility conditions imposed by the Dirac equation on the initial spinor field.

We seek a solution to the Einstein-Dirac system consisting of a four-dimensional Lorentzian manifold $(\mathcal{M},g)$, endowed with a smooth embedding $\iota:\Sigma\hookrightarrow\mathcal{M}$, such that $\overline{g}$ coincides with the induced Riemannian metric $\iota^\ast g$ on $\Sigma$, $\overline{k}$ coincides with the second fundamental form of $\Sigma$ in $(\mathcal{M},g)$, and the spinor field $\Psi$ on $\mathcal{M}$ restricts to $\overline{\Psi}$ on~$\Sigma$ under a fixed identification of the spin structures. The resulting spacetime must satisfy the coupled equations \eqref{main system} everywhere in $\mathcal{M}$.

Recall that a \emph{Cauchy development} of $(\Sigma,\overline g,\overline{k},\overline{\Psi})$ is a spacetime $(\mathcal{M},g,\Psi)$, together with an embedding $\iota$, enjoying the above properties and such that $\iota(\Sigma)$ is a Cauchy hypersurface for $(\mathcal{M},g)$, that is, every inextendible causal curve in $\mathcal{M}$ intersects $\iota(\Sigma)$ exactly once. A development is said to be \emph{maximal} if it cannot be extended to a strictly larger Cauchy development of the same initial data, in the sense that there is no isometric embedding into another development which preserves the initial hypersurface and extends both the metric and the spinor field.

Our objective, under suitably smallness conditions on the initial data set, is to construct a \emph{global future Cauchy development}, namely the (possibly maximal) Cauchy development in which the evolution of the initial data is defined for the entire future of $\Sigma$, without encountering singularities or breakdown of the gauge conditions. Specifically, our aim is to construct global Cauchy developments for \emph{small perturbations} of Minkowski initial data, and prove that the resulting spacetime is geodesically complete. Indeed, we will show that it remains asymptotically Euclidean along all causal directions. 

%-----------------------------------------------------------------------------------------------------------------------------------------

\subsection{Reference, coordinate transform and formulation into PDE system}

As we are interested in the perturbation problem, we work near the flat Minkowski spacetime, and we can assume that $\Mcal$ is covered by a globally-defined coordinate chart $\{\xsans^{\alpha}\}$. We work in the domain
\be
\Mcal^+ \simeq \RR_+^{1+3} = \{(\tsans,\xsans)|\xsans\in\RR^3,\tsans \geq 1\}.
\ee
As in previous works, the perturbation under consideration is not defined directly with respect to the Minkowski metric, but with respect to a \emph{reference metric} which carries the \emph{ADM mass} of the spacetime. For example in \cite{LR2} the reference is essentially the Schwarzschild metric. In the present Monograph, our notion of {\sl admissible reference metric} is introduced in Section~\ref{subsec1-20-sept-2025} ---which is comparable with the so-called {\bf Class B} proposed in~\cite{PLF-YM-PDE}. However, in contrast, we \underline{do not assume a priori} that the {\sl light-bending condition} is satisfied. This setup encompasses one of the most significant case, namely, the approximate vacuum solution to the Einstein equations constructed in \cite{LR2}; in other words, we are considering massive perturbations of the vacuum solution. Observe that the notion of admissible reference involve four parameters $(N,\theta,\epss,\ell)$, which correspond to the regularity, decay rate, amplitude, and special decaying rate near the light-cone, respectively.

As mentioned above, we perform a coordinate transformation
\be
(\tsans,\xsans)\rightarrow(t,x), 
\ee
(defined in Proposition~\ref{prop1-01-oct-2025}, below), after a proposal made by Kauffman and Lindblad \cite{KauffmanLindblad}. Crucially, we establish that we can always guarantee that the reference in the nex coordinates satisfies the light-bending reference; see Proposition~\ref{prop1-01-oct-2025}. Denoting by $g$ the spacetime metric and $\gref$ the reference metric, we consider 
\be
u_{\alpha\beta} := g_{\alpha\beta} - \gref_{\alpha\beta}, 
\ee
where $g_{\alpha\beta},\gref_{\alpha\beta}$ are the components in the new coordinates $(t,x)$. Then we formulate the Einstein-Dirac system \eqref{main system} in the form 
\begin{subequations}\label{eq1-22-oct-2025}
\begin{equation}
\aligned
g^{\mu\nu}\del_{\mu}\del_{\nu}u_{\alpha\beta} 
& =\Fbb(g,g;\del u,\del u)_{\alpha\beta} 
+ u^{\mu\nu}\del_{\mu}\del_{\nu}\gref_{\alpha\beta} 
+ \Sbb(g,\del g,\Psi,\del\Psi)_{\alpha\beta}
\\
& \quad + 2\Fbb(g,g;\del\gref,\del u)_{\alpha\beta} 
+ 2\Fbb(u,\gref;\del \gref,\del \gref)_{\alpha\beta} 
+ \Fbb(u,u;\del\gref,\del\gref)_{\alpha\beta}
\\
& \quad -2\big(\mathbb{G}(g,\del g,\Gamma,\del\Gamma)_{\alpha\beta} - \mathbb{G}(\gref,\del \gref,\Gamma_{\mathrm{r}},\del\Gamma_{\mathrm{r}})_{\alpha\beta}\big),
\\
\opDirac\Psi + \mathrm{i}M\Psi & = 0,
\endaligned
\end{equation}
with the generalized coordinate condition
\begin{equation}\label{eq2-22-oct-2025}
-\Box_g x^{\lambda} = \Gamma^{\lambda}  
= {\Gamma_{\mathrm{r}}}^{\lambda} 
+ \del_{\alpha}\big(\Phisans_{\beta}^{\delta}\big)\Phisans_{\delta}^{\lambda} u^{\alpha\beta}.
\end{equation}
\end{subequations}
For further details, we refer to Section~\ref{subsec1-22-oct-2025}. Here, ${\Gamma_{\mathrm{f}}}^{\lambda}$ represents the wave coordinate remainder of the reference spacetime. Roughly speaking, the terms associated with $\Fbb$ are nonlinear terms from the Einstein equations and, more precisely, from the Ricci curvature. The term $\Sbb$ contains the source terms of $\Psi$ acting on the spacetime. The last term contains the errors from the coordinate transformation and the generalized wave coordinates condition.

\subsection{Statement of the nonlinear stability of self-gravitating Dirac spinor fields}

The system \eqref{eq1-22-oct-2025} is hyperbolic and quasilinear in nature, thus admits local solutions when the initial data is given on a spacelike slice and sufficiently small in certain Sobolev norms. The main result of the present work is that, with sufficient smallness conditions, the local solutions extends to time, space and null infinity. Due to the lack of notation at this stage, we only state a semi-quantitative version (quantitative on the metric) and postpone the complete quantitative statement to Section~\ref{section=N15}; cf.~Theorem~\ref{subsec2-22-oct-2025-theo}. 

\begin{theorem}[Nonlinear stability of the Minkwoski spacetime with massive Dirac field, Qualitative statement]
Assume that $\gref$ is a $(N,\theta,\epss,\ell)$ admissible reference with sufficiently large $N$, sufficiently small $\theta,\epss$ and $\ell\in (0,1/2)$. Assume that the spacetime metric $g$ and the spinor field $\Psi$ satisfies the following smallness conditions on the initial slice $\{\tsans=1\}$,
\begin{equation}
\sum_{|I|\leq N} \big\| \la \rsans\ra^{\kappa + |I|} \big( |\delsans_x^I\delsans_x u_0| + |\delsans_x^I \delsans_tu_0| \big) \big\|_{L^2(\RR^3)}\leq \eps, 
\quad u_0=u\big|_{\{\tsans=1\}},\quad\kappa\in(1/2,3/4),
\end{equation}
\begin{equation}
\big\|\Psi|_{\{\tsans=1\}}, \nabla_{\vec{n}}\Psi\big|_{\{\tsans=1\}}\big\|_{N,\mu}\leq \eps,\quad \mu\in(3/4,1)
\end{equation}
where $\|\cdot\|_{N,\mu}$ represents a Sobolev norm of order $N$ with a weight $\la\rsans\ra^{\mu}$ to be specified in Section~\ref{subsec2-22-oct-2025}. Here $\vec{n}$ represents the future oriented unit normal vector of the slice $\{\tsans=1\}$. Then the maximal globally hyperbolic Cauchy development of the data is future causally geodesically complete, remains asymptotically flat in all causal directions. 
\end{theorem}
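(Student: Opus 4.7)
The plan is to run a bootstrap argument on the Euclidean--hyperboloidal foliation $\MME$ of \cite{PLF-YM-PDE}, in which the interior region is foliated by hyperboloids (exploiting the Klein--Gordon-type dispersion generated by the mass $M$) and the asymptotically Euclidean exterior is foliated by slightly tilted slices adapted to the light-bending wave coordinates, so that the ADM-mass contribution carried by $\gref$ is absorbed into the background. The basic quantities to control are weighted Sobolev norms, at every order $|I|\le N$, of $u_{\alpha\beta}$ and $\Psi$ commuted with the admissible vector fields---translations $\del_\alpha$, spatial rotations $\Omega_{ij}$, and Lorentz boosts $L_i$---each lifted to act on the spin bundle in a gauge-invariant manner via the formalism of \eqref{eq3-21-11-2024-Z}. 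The bootstrap assumption will be that a full hierarchy of such norms remains bounded by $C\eps$ with weights $\la\rsans\ra^{\kappa+|I|}$ for the metric and $\la\rsans\ra^{\mu}$ for the spinor, and the goal is to improve the constant to $\tfrac{C}{2}\eps$.

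\textbf{Energy estimates.}
For the metric perturbation I would commute the reduced system \eqref{eq1-22-oct-2025} with the admissible fields and derive weighted energy estimates on both the hyperboloidal and Euclidean parts of $\MME$, carefully exploiting the weak null structure of $\Fbb(g,g;\del u,\del u)$ in the spirit of Lindblad--Rodnianski and Kauffman--Lindblad, while treating the reference terms $u^{\mu\nu}\del_\mu\del_\nu\gref$, the mixed terms $\Fbb(\,\cdot\,;\del\gref,\,\cdot\,)$, and the wave-gauge corrections $\mathbb{G}(g,\del g,\Gamma,\del\Gamma)-\mathbb{G}(\gref,\del\gref,\Gamma_{\mathrm{r}},\del\Gamma_{\mathrm{r}})$ by means of the $(N,\theta,\epss,\ell)$-admissibility of $\gref$. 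For the spinor I would combine two complementary tools: a direct $L^2$ energy estimate from the conserved current associated with $\opDirac\Psi+\mathrm{i}M\Psi=0$, and the Klein--Gordon-type equation $(\opDirac-\mathrm{i}M)(\opDirac+\mathrm{i}M)\Psi=0$ obtained by squaring, which produces a wave operator plus a $-M^2$ term on the left and allows the use of the mass-weighted hyperboloidal energy delivering $t^{-3/2}$-decay in the wave zone. The commutator of each admissible vector field with $\opDirac$ must be computed via its spin lift, producing additional contributions involving $\omega_{\mu ab}\Sigma^{ab}$ and $\del_\mu\cdot$, which the gauge-invariant formalism keeps manifestly coordinate-free.

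\textbf{Pointwise decay and Sobolev inequalities.}
To close the coupling, the energy bounds must be converted into pointwise decay for $\Psi$, $\nablas\Psi$, $u$ and $\del u$. Here I would establish Sobolev-type inequalities adapted to the spinor bundle and to the Euclidean--hyperboloidal foliation: on the hyperboloidal part one uses the boost fields together with the spin connection in \eqref{eq3-21-11-2024-Z}, while on the Euclidean part one uses translations and rotations with the $\la\rsans\ra^\mu$ weight. The pointwise decay so obtained is then reinserted into the source $\Sbb(g,\del g,\Psi,\del\Psi)$ of the metric equation and into the energy--momentum tensor $T[\Psi]_{\mu\nu}$ in \eqref{equa-tensorTmunu-0}, while the pointwise control of $u$ and $\del u$ enters the Dirac equation through Clifford multiplication and through the spin connection built from $g$. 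One must separately track the order of differentiation in translations, rotations and boosts, since boosts $L_i$ grow linearly in $t$ and generate the strongest interaction with the hyperboloidal geometry.

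\textbf{Closing the bootstrap and main obstacle.}
Combining the improved energies with the pointwise estimates should yield strict improvements of the bootstrap assumptions when $\eps$, $\theta$ are sufficiently small and $N$ is sufficiently large; a continuity argument then produces a global solution $(\Mcal^+,g,\Psi)$ in the light-bending wave coordinates. Causal geodesic completeness and asymptotic flatness in all timelike, null, and spacelike directions then follow by integrating the resulting pointwise decay of $g-\gMink$ and its derivatives along causal curves, as in \cite{PLF-YM-PDE,LR2}. The main obstacle, in my view, is the careful bookkeeping of the coupled hierarchy: the Dirac equation is first-order, so each derivative of $\Psi$ costs one factor of decay compared with the Klein--Gordon case, yet $T[\Psi]_{\mu\nu}$ is quadratic in $\Psi$ with a derivative; one must therefore show that the mass-induced hyperboloidal decay of $\Psi$ genuinely beats this loss at every level of the hierarchy---especially for boost-differentiated spinors, where the spin lift of $L_i$ injects additional spin-connection terms---and that the long-range corrections from the wave-coordinate remainder $\Gamma_{\mathrm{r}}$ and from the light-bending contributions in $\gref$ do not propagate catastrophically into the Dirac estimates.
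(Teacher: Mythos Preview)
Your proposal is correct and follows essentially the same architecture as the paper: Euclidean--hyperboloidal bootstrap, light-bending wave coordinates over an admissible reference, gauge-invariant commutation of $\opDirac$ with $\{\del_\alpha,L_a,\Omega_{ab}\}$, spinorial Klainerman--Sobolev inequalities, and a hierarchy in $(p,k)$ distinguishing rank from order. A few execution-level points you should anticipate: (i) the ``spin lift'' you invoke must be made precise as the Clifford-adapted derivative $\widehat{Z}\Psi=\nabla_Z\Psi-\tfrac14 g^{\alpha\gamma}\del_\alpha\cdot\nabla_\gamma Z\cdot\Psi$, which is what actually produces commutators $[\widehat{Z},\opDirac]$ governed by the deformation tensor $\pi[Z]$ rather than by curvature; (ii) the squared-Dirac/Klein--Gordon route is used only for \emph{pointwise} estimates (via integration along orthogonal curves of the foliation), while the $L^2$ theory runs entirely on the first-order current $V[\Psi]=\la\Psi,g^{\alpha\beta}\del_\alpha\cdot\Psi\ra_{\ourD}\del_\beta$ and the norm $|\Psi|_{\vec n}$; (iii) in the merging region near the light cone $|\del H|$ alone does not decay fast enough to close the Dirac commutator, and the paper needs a null-frame decomposition exploiting the generalized wave gauge to replace the bad component of $\pi[\del_\delta]$ by $\tfrac{\la r-t\ra}{r}|\del H|+|\delsN H|$ plus an ``absorbable'' piece paired against $|\Psi|_{\vec\gamma}$ with $\vec\gamma=\mathrm{grad}(r-t)$; (iv) an \emph{additional} bootstrap assumption on the tangential components $t^{-1}\widehat{L_a}\mathscr{Z}^J\Psi$ (top rank) is required and is reinforced separately. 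These are the places where the heuristic in your last paragraph becomes delicate.
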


%------------------------------------

\paragraph{General strategy of proof.}

Our approach builds on the framework developed earlier by two of the authors for the Einstein--Klein--Gordon system~\cite{PLF-YM-CRAS}--\cite{PLF-YM-PDE}. However, substantial new elements are required, and introduced here, to handle massive spinor fields via a fully gauge-invariant formulation. 

The proof proceeds in generalized wave coordinates satisfying \eqref{eq2-22-oct-2025} chosen so as to enforce the light-bending property. In short, the metric evolution is analyzed via a hierarchy of weighted energy estimates adapted to the Euclidean--hyperboloidal foliation, while the Dirac equation is treated directly in its geometric, gauge-invariant form, thus avoiding any fixed choice of gamma-matrix representation and/or tetrad. This requires a careful study of the spin connection coefficients in the hyperboloidal frame, as well as commutator estimates for the coupled system. 

The analysis is closed by a bootstrap argument in which the metric and spinor estimates are coupled: the metric bounds rely on the \emph{quasi-null structure} and provide the decay needed for the spinor field, while the spinor energy--momentum tensor is shown to satisfy improved decay in suitable directions, thereby contributing to the stability of the geometry. In turn, we establish uniform bounds on all commuted fields, proving global existence, completeness, and the asymptotic approach to Minkowski spacetime. Specific estimates for the metric and the spinor field will be stated. Since we follow the main lines of proof in LeFloch--Ma~\cite{PLF-YM-PDE}, we will not repeat here the full set of metric estimates, and will instead focus most of our analysis on the spinor field and its coupling with and its effect on the geometry. 

%------------------------------------------------------------------------------------------------------------------------------------------------

\subsection{Structure of this Monograph}

\paragraph{Part I.}

Sections~\ref{section=N2} to \ref{section=N10} build the geometric and analytic framework needed to treat Dirac spinors on a Lorentzian manifold in a \emph{frame-independent} fashion and establish fundamental commutator and decay estimates. Starting from the real Clifford algebra $\mathrm{Cl}(3,1)$ and the spin group $\mathrm{Spin}_{3,1}^+$, global orthonormal frames give rise to spin bundles, and different global frames on $\RR^4$ lead to \emph{isomorphic} bundles. The resulting \emph{spin structure} ---the isomorphism class of these bundles ---does not depend on the chosen tetrad. On the principal-bundle side, an Ehresmann connection on the orthonormal frame bundle induces a connection on every associated bundle, in particular on the spin bundle; the Levi--Civita connection lifts canonically and yields a \emph{spin connection}. Passing to the associated Dirac bundle provides a covariant derivative for spinors and, consequently, a gauge-invariant formulation of the Dirac operator and the Einstein-Dirac equations. Our analysis in the present work treates the spinor field intrinsically, so that changes of frame (local or global) amount only to conjugation by the canonical bundle isomorphisms.

In Part~\ref{part-one}, we also set up the Euclidean--hyperboloidal foliation and associated frames,  weights, and ``good'' derivatives in Section~\ref{section=N4}. We work with a spacelike slicing adapted to the structure of wave-type equations in timelike, space-like, and null directions. We also introduce the energy functionals and hyperboloidal norms tailored to spinor fields (Section~\ref{section=N5}), and we establish Sobolev inequalities on hyperboloidal slices, on order to eventually reach  sharp $t-r$ decay from weighted energies 
(Section~\ref{section=N6}). In addition, we establish pointwise bounds for massive Dirac fields on light-bending backgrounds (Section~\ref{section=N7}), and develop high-order vector-field operators and commutator calculus suited to the Dirac operator (Section~\ref{section=N8}). A refined decomposition of the 
Dirac commutator in the near-light-cone region (Section~\ref{section=N9}) captures the favorable structure needed to control certain borderline terms, and leads to improved pointwise estimates for massive spinors (Section~\ref{section=N10}). Collected together, these results deliver a robust decay and commutation framework, which should be applicable to a broad set of evolution problems. 

\paragraph{Part II.}

Part~\ref{part-two}  reformulates the coupled Einstein-Dirac system in a coordinate system
tailored to be able to conveniently analyze the decay and (quasi-)null structure of the metric and spinor field, via  a global bootstrap scheme. Section~\ref{section=N11} gathers preliminary identities and structural bounds needed for nonlinear energy estimates. Section~\ref{section=N12} analyzes source terms in the hyperboloidal region, isolating the quadratic--cubic decomposition and identifying contributions with favorable decay. In Section~\ref{section=N13} we construct a light-bending, almost wave-coordinate chart and transfer the asymptotic flatness and gauge constraints to the new variables, establishing key bounds on the contracted Christoffel symbols. Section~\ref{section=N14} recasts the full Einstein-Dirac system in generalized wave coordinates, exposing null structures and commutator cancellations. Sections~\ref{section=N15}--\ref{section=N16}
then lay out the global strategy: a hierarchy of hyperboloidal energies, pointwise decay via Sobolev inequalities on the foliation, and a closed bootstrap controlling metric and spinor components in both the interior and exterior regions.

\paragraph{Part III.} Part~\ref{part-three} collects the geometric and analytic background for spin geometry and the technical tools used in the main proofs. Section~\ref{section=N17} reviews the spin structure, Clifford algebra conventions, and basic identities. Section~\ref{section=N18} develops gauge-invariant differentiation for spinors
and its compatibility with the Dirac operator and curvature. Sections~\ref{section=N19}
and~\ref{section=N20} provide additional material on the gauge-invariant framework and detailed properties of the spinorial connection, including covariant commutators. Section~\ref{section=N21} contains the proof of a key auxiliary lemma. Section~\ref{section=N22} establishes the uniformly-spacelike
character of the hyperboloidal slices. Sections~\ref{section=N23}--\ref{section=N24}
gather high-order product/commutator estimates and refined Sobolev embeddings on the foliation. This supplies the algebraic and geometric estimates repeatedly invoked in Parts~\ref{part-one} and \ref{part-two}.

%-----------------------------------------------------------------------------------------

%=============================================================================== 

\subsection{Summary of notation}
\label{section-1--4}

We summarize our main notation in three tables: coordinates and frames; metrics and perturbations; and
vector fields and operators.

% =========================== 
\begin{table}[h!]
\centering
\caption{Notation: \bf coordinates and frames.}
\begin{tabular}{@{}ll@{}}
\\
$\{\xsans^{\alpha}\}$ & original coordinates with derivatives $\delsans_{\alpha} := \frac{\partial}{\partial \xsans^{\alpha}}$
\\
$\{\xavec^{\alpha}\}$ & light--bending coordinates  with $t=\tsans$, $x^a=(r/\rsans)\xsans^a$
\\
& and $r=\rsans-\Kref\epsm\,\chi(\rsans/\tsans)\,\rsans^{\theta}$
\\
$\Psisans_{\beta}^{\alpha}=\frac{\partial \xavec^{\alpha}}{\partial \xsans^{\beta}}$, $\Phisans=\Psisans^{-1}$ 
& transition matrices with $\del_{\alpha}=\Phisans_{\alpha}^{\beta}\delsans_{\beta}$
\\
$\chi$ & cutoff function
\\
$\delu_a$,\; $\del^{\Ncal}_a$ & tangential (``good'') derivatives in semi-/null frames 
\\
$\Mcal_s$ & slice of the Euclidean--hyperboloidal foliation
\\
$\Mcal^{\Hcal}_{[s_0,s_1]}$, $\Mcal^{\ME}_{[s_0,s_1]}$ & hyperboloidal / exterior slab
\end{tabular}
\\
\end{table}

% =========================== 
\begin{table}[h!]
\centering
\caption{Notation: \bf metrics and perturbations.}
\begin{tabular}{@{}ll@{}}
\\
$\eta_{\alpha\beta}$ & Minkowski metric 
\\
$g_{\alpha\beta}$,\; $\gref_{\alpha\beta}$ & unknown and reference metrics
\\
$\Hsans_{\alpha\beta}:=\gsans_{\alpha\beta}-\etasans_{\alpha\beta}$ 
& perturbation in $\{\xsans^{\alpha}\}$
\\ 
$\Hreff_{\alpha\beta}:=\gref_{\alpha\beta}-\eta_{\alpha\beta}$ 
& reference perturbation in $\{x^{\alpha}\}$
\\
$\H^{\Ncal 00}$ & null component $(\diff t-\diff r,\diff t-\diff r)_{\;\cdot}$
\\
$\Gamma^{\lambda}:=g^{\alpha\beta}\Gamma_{\alpha\beta}^{\lambda}$ 
& contracted Christoffel symbols for $g$
\\
\end{tabular}
\end{table}

% =========================== 
\begin{table}[h!]
\centering
\caption{Notation: \bf vector fields and operators.}
\begin{tabular}{@{}ll@{}}
\\
$\Kscr=\{\del_{\alpha},L_a,\Omega_{ab},S\}$,\; $\widetilde{\Kscr}$ & Extended admissible families in $\{x^{\alpha}\}$ and $\{\xsans^{\alpha}\}$
\\
$\mathscr{Z}=\{\del_{\alpha},L_a,\Omega_{ab}\}$,\; $\widetilde{\mathscr{Z}}$ &  Admissible families in $\{x^{\alpha}\}$ and $\{\xsans^{\alpha}\}$
\\
$L_a =x^a\del_t+t\del_a$;\; $\Omega_{ab} =x^a\del_b-x^b\del_a$;\; $S =t\del_t+x^a\del_a$ & boosts, rotations, scaling
\\
$\Kscr^I,\mathscr{Z}^I$ of type $(p,k)$ & Extended/Admissible operator of order $p$ and rank $k$ 
\\
$\opDirac$,\; $\widehat{Z}$ & Dirac operator and Clifford--adapted derivative
\\
$[\widehat{Z},\opDirac]$;\; $\pi[Z]$ & commutator and deformation tensor 
\\
$[\,\cdot\,]_{p,k}$,\; $|\cdot|_{p,k}$ & weighted norms 
\\
\end{tabular}
\end{table}

%==========================================================================

\clearpage 

\part{Gauge-invariant Euclidean--hyperboloidal framework for spinor fields} 
\label{part-one}

\section{Spinorial calculus with the Dirac operator} 
\label{section=N2}

\subsection{ Spinorial covariant derivative}
\label{section===41}

{

\paragraph{Aim of this section.}

We now summarize the calculus rules, and refer to Sections~\ref{section=N17} and~\ref{section=N18} for standard material on spinor fields in a curved spacetime.  For the reader's convenience, although this is a standard result, we present a construction of the Ehresmann connection in~Section~\ref{Appendix--B5}. All of our results concerning spinor fields are independent of the choice of tetrad. However, for the derivation of certain identities and estimates, occasionally it will be convenient to fix a tetrad and deal with the coordinate representation $\psi$ of a spinor $\Psi$, while all of our final identities remain independent of the choice of gauge, according to the general theory in Section~\ref{section=N17}. 

Importantly, the notation and identities introduced in the present section will be used throughout the rest of this work. We use the same symbol $\nabla$ to denote both the covariant derivatives of tensor fields and the spinorial derivatives of spinor fields. This should not lead to confusion, as we typically use capital Latin letters $X, Y, \ldots$ for vector fields and Greek letters $\Psi$ (or $\psi$) for spinor fields (or their coordinates). Recall that the Clifford product is denoted simply by a dot.

%----------------------------------------------------------- 

\paragraph{Covariant derivative.}

As we have shown in and after Proposition~\ref{proposition-JD44}, it suffices to define the spin structure (and associated spinor bundle, spinorial covariant derivative, Dirac form, etc.) with respect to a single tetrad, as commonly presented in textbooks. For the reader's convenience, we include a brief construction based on the horizontal distribution in Section~\ref{section=N19}.  An explicit construction based on the Levi-Civita connection form can be found in \cite[Sec. 6]{Hamilton-2017}. We now state several properties of the spinorial covariant derivative, two of which follow directly from the general definition \eqref{eq3-08-jan-2025}:
\begin{subequations}
\begin{equation}
\nabla_X(f\Psi)  = \mathrm{d}f(X)\Psi + f\nabla_X\Psi,
\end{equation}
\begin{equation}
\nabla_{fX} \Psi  = f\nabla_X\Psi.
\end{equation}
These two properties hold true for all covariant derivatives defined on vector bundles.
\end{subequations}
In view of the standard construction of the spin connection, the following properties hold specifically for the spinorial covariant derivative and the Dirac form:
\begin{subequations}
\begin{equation}
\nabla_{X}(Y\cdot\Psi)  = \nabla_XY + Y\cdot\nabla_X\Psi,
\end{equation}
where $X,Y$ are vector fields and $\nabla_XY$ denote the Levi-Civita derivative. This is known as the {\it compatibility property with respect to the Levi-Civita connection}(cf.~\cite[Sec.6, Theorem 6.10.13]{Hamilton-2017}), while the following one is also known as a {\sl compatibility property with the scalar product} (cf. \cite[Theorem 6.10.14]{Hamilton-2017}):
\begin{equation}
\Lcal_X\la  \Psi,\Phi \ra_{\ourD} = \la \nabla_X\Psi, \Phi \ra_{\ourD}  + \la  \Psi,\nabla_X\Phi\ra_{\ourD}.
\end{equation}
\end{subequations}

}

%----------------------------------------------------------------------------------------------------------------------------------------

\subsection{ Fundamental calculus for spinor fields} 
\label{section===42}

{

\paragraph{Combining tensor and spinor fields.}

We now provide further remarks on the treatment of spinor fields. Let us consider the tensor product of a tensor bundle with the spinor bundle (all of which are real vector bundles), namely
\be
\ourS(\mathcal{M})^r_s := T(\mathcal{M})^r_s\otimes \ourS(\mathcal{M}), 
\ee
which may be understood as the space of multilinear forms on the tangent and cotangent bundles taking values in the spinor bundle. A section of $\ourS(\mathcal{M})^r_s$ is still called a $(r,s)-$ spinor field, or simply a spinor when there is no risk of confusion. When  $T$ is a tensor field on $\mathcal{M}$ and $\Psi$ is a spinor field, we define
\be
\nabla(T\otimes \Psi) := \nabla T\otimes \Psi + T\otimes \nabla\Psi, 
\ee
where the latter occurrence of $\nabla$ acting on $\Psi$ denotes the spinor connection. When $T$ is a scalar field,  we have $\nabla T = \mathrm{d}T$, and the above identity reduces to the Leibniz rule for the covariant derivative. For a general tensor or spinor, say $A$, we use the standard notation on connection 1-form, that is, 
\be
\nabla A(X) = \nabla_X A.
\ee
Then, by the Leibniz rule, the second-order connection form (i.e. the Hessian form) can be expressed as 
\bel{equa46--}
\nabla\nabla A(X,Y) = \nabla_X\nabla_Y A = \nabla_X (\nabla_Y A) - \nabla_{\nabla_XY} A, 
\ee
which is checked by observing that $\nabla_X( \nabla A(Y)) = \nabla_X(\nabla A)(Y) + \nabla A(\nabla_XY)$. 

%--------------------------------------------------------------

\paragraph{Link with the standard notions.}

The above notation coincides with the well-known calculus rule for scalar fields:
\be
\nabla_X\nabla_Y\phi = \nabla_X(\nabla_Y\phi) - \nabla_{\nabla_XY} \phi = X(Y\phi) - (\nabla_XY)\phi
\ee
or, in abstract coordinates,
\be
\nabla_{\alpha} \nabla_{\beta} \phi = \del_\alpha \del_{\beta} \phi - \Gamma_{\alpha\beta}^{\gamma} \del_{\gamma} \phi,
\ee
For vector or spinor fields we also write 
\begin{subequations} \label{eq1-24-feb-2025}
\begin{equation}\label{eq1a-24-feb-2025}
\nabla_X\nabla_Y Z = \nabla\nabla Z(X,Y) = \nabla_X(\nabla_Y Z) - \nabla_{\nabla_XY}Z
\end{equation}
and, in abstract coordinates,
\begin{equation}\label{eq1b-24-feb-2025}
\nabla_{\alpha} \nabla_{\beta}Z = \nabla_{\alpha}(\nabla_{\beta}Z) - \nabla_{\nabla_{\alpha} \del_{\beta}}Z = \nabla_{\alpha}(\nabla_{\beta}Z) - \Gamma_{\alpha\beta}^{\delta} \nabla_{\delta}Z.
\end{equation}
\end{subequations}
However, the notation $\nabla_X\nabla_Y Z$ can also be understood as $\nabla_X\circ\nabla_YZ = \nabla_X(\nabla_Y Z)$, which is {\sl not a two form} in $(X,Y)$ (and this does not maintain the consistency with the scalar case). Throughout, we always rely on the definition \eqref{eq1-24-feb-2025}. With this convention, the Riemann curvature acting on $Z$ can be defined as the anti-symmetric part of $\nabla\nabla Z$:
\begin{equation}\label{eq1-31-aout-2025}
R(X,Y)Z = \nabla_X\nabla_Y Z - \nabla_Z\nabla_Y Z = \nabla_X(\nabla_YZ) - \nabla_X(\nabla_YZ) - \nabla_{[X,Y]}Z.
\end{equation}

%--------------------------------------------------

\paragraph{Abstract index notation.}

Suppose that in abstract local coordinates a spinor or tensor $A$ can be written as $A = A^{\alpha} \otimes \del_{\alpha}$, or $A = A_{\alpha} \otimes \mathrm{d}x^{\alpha}$. Here $A^{\alpha}, A_{\alpha}$ may be scalar functions or lower order spinor fields. Then we denote by
\be
\nabla_X A^{\alpha} = (\nabla_X A)^{\alpha}, \qquad \nabla_X A_{\alpha} = (\nabla_X A)_{\alpha},
\ee
so that
\be
\nabla_X A^{\alpha} \otimes \del_\alpha = \nabla_X A, \qquad \nabla_X A_{\alpha} \otimes \mathrm{d}x^{\alpha} = \nabla_X A.
\ee
Then in abstract coordinates, we have 
\begin{equation}
\nabla_{\beta} A^{\alpha} = \nabla_{\beta}(A^{\alpha}) + A^{\gamma} \Gamma_{\beta\gamma}^{\alpha},
\quad 
\nabla_{\beta} A_{\alpha} = \nabla_{\beta}(A_{\alpha}) - A_{\gamma} \Gamma_{\beta\alpha}^{\gamma}.
\end{equation}
In the above expression, $\nabla_{\beta}(A^{\alpha})$ denotes the covariant derivative on the specific scalar, tensor, or spinor field $A^{\alpha}$. For example, in the purely vectorial case, $A^{\alpha}$ is a scalar function, namely $\mathrm{d}x^{\alpha}(A)$. Thus we have 
\be
\nabla_{\beta} A^{\alpha} = \nabla_{\beta}(A^{\alpha}) + A^{\gamma} \Gamma_{\beta\gamma}^{\alpha} = \del_{\beta} A^{\alpha} + A^{\gamma} \Gamma_{\beta\gamma}^{\alpha}. 
\ee
This is the standard formula when $A^{\alpha},A_{\alpha}$ are scalar and $\nabla_{\beta}$ reduces to the partial derivative $\del_{\beta}$.

With the above notation, we can establish the following property ---commonly referred to as the statement that {\sl  Gamma matrices are covariantly constant} (cf., for instance, \cite{Bertlmann00,ChenXuantao}). 

\begin{lemma} \label{lem1-24-feb-2025}
In any local coordinate chart, for any spinor field $\Psi$ one has 
\begin{equation}
\nabla_{\alpha}(\del_{\beta} \cdot\Psi) = \del_{\beta} \cdot\nabla_{\alpha} \Psi.
\end{equation}
\end{lemma}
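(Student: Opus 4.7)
The identity is the gauge-invariant version of the classical statement that ``gamma matrices are covariantly constant.'' Following the abstract-index convention introduced just above the lemma, the index $\beta$ in $\del_\beta\cdot\Psi$ must be treated as a free coordinate index inherited from the basis vector $\del_\beta$; consequently $\nabla_\alpha(\del_\beta\cdot\Psi)$ denotes the full covariant derivative of the spinor-valued one-form with components $\omega_\beta:=\del_\beta\cdot\Psi$, and therefore incorporates a Christoffel correction acting on $\beta$. The plan is to expand this convention and observe that the correction cancels against the Christoffel term produced by the Levi-Civita connection acting on $\del_\beta$ inside the spinorial derivative, leaving only the desired right-hand side.

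Concretely, I would first write the identity predicted by the abstract-index convention in the form
$$\nabla_{\alpha}(\del_{\beta}\cdot\Psi)\;=\;\widetilde{\nabla}_{\del_\alpha}(\del_{\beta}\cdot\Psi)\;-\;\Gamma_{\alpha\beta}^{\gamma}\,\del_{\gamma}\cdot\Psi,$$
where $\widetilde{\nabla}_{\del_\alpha}$ denotes the \emph{bare} spinorial covariant derivative, i.e.\ $\del_\beta\cdot\Psi$ is differentiated as a plain spinor with $\beta$ held fixed. Next, I would invoke the compatibility of the spinor connection with Clifford multiplication recalled in Section~\ref{section===41}, namely $\nabla_X(Y\cdot\Psi)=\nabla_X Y\cdot\Psi+Y\cdot\nabla_X\Psi$, specialized to $X=\del_\alpha$ and $Y=\del_\beta$. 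Since $\nabla_{\del_\alpha}\del_\beta=\Gamma_{\alpha\beta}^{\gamma}\del_\gamma$ by the Levi-Civita formula, this gives
$$\widetilde{\nabla}_{\del_\alpha}(\del_{\beta}\cdot\Psi)\;=\;\Gamma_{\alpha\beta}^{\gamma}\,\del_{\gamma}\cdot\Psi\;+\;\del_{\beta}\cdot\nabla_{\alpha}\Psi.$$
Substituting back, the two occurrences of $\Gamma_{\alpha\beta}^{\gamma}\,\del_{\gamma}\cdot\Psi$ appear with opposite signs and cancel, leaving precisely $\del_\beta\cdot\nabla_\alpha\Psi$, as claimed.

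No real analytic obstacle is encountered; the argument is pure bookkeeping, and its key mechanism is the cancellation of two Christoffel contributions of identical algebraic structure but opposite sign ---one arising from the abstract-index convention on the covector index $\beta$, the other from the Levi-Civita derivative $\nabla_{\del_\alpha}\del_\beta$ inside the compatibility identity. The only (mild) conceptual check is that both Christoffel symbols refer to the same Levi-Civita connection of $g$, which is automatic because the spin connection used throughout Section~\ref{section===41} is defined as the canonical lift of Levi-Civita. As a cleaner, coordinate-free restatement one may introduce the spinor-valued one-form $\omega(X):=X\cdot\Psi$ and reformulate the lemma as the identity $(\nabla\omega)(X,Y)=X\cdot\nabla_Y\Psi$, whose proof by the Leibniz rule makes the cancellation manifest without invoking any coordinate chart.
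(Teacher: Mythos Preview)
Your proof is correct and follows essentially the same approach as the paper: both introduce the spinor-valued one-form $T(Y)=Y\cdot\Psi$ (your $\omega$), split the covariant derivative on the index $\beta$ via the abstract-index convention to produce a Christoffel correction, apply the compatibility identity $\nabla_X(Y\cdot\Psi)=\nabla_X Y\cdot\Psi+Y\cdot\nabla_X\Psi$ to produce the opposite correction, and observe the cancellation. The paper's $\nabla_\alpha^{\ourS}$ is exactly your $\widetilde{\nabla}_{\del_\alpha}$, and your closing coordinate-free reformulation is precisely the paper's starting point.
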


\begin{proof} Let us set  $T(Y): = Y\cdot\Psi$ and compute 
$$
\hskip1.cm
\aligned
\nabla_{\alpha}(\del_{\beta} \cdot\Psi) 
& =  \nabla_{\alpha}T(\del_{\beta}) 
= \nabla_{\alpha}(T_{\beta}) - T(\nabla_{\alpha} \del_{\beta}) 
= \nabla_{\alpha}^{\ourS}(\del_{\beta} \cdot\Psi) - \nabla_{\alpha} \del_{\beta} \cdot\Psi 
\\
& =  \nabla_{\alpha} \del_{\beta} \cdot\Psi + \del_{\beta} \cdot\nabla_{\alpha} \Psi 
- \nabla_{\alpha} \del_{\beta} \cdot\Psi
 =  \del_{\beta} \cdot\nabla_\alpha \Psi. \hskip8.Cm  \qedhere 
\endaligned 
$$
Here, the notation $\nabla^{\ourS}_{\alpha}$ refer to the fact that this operation acts on the spinorial component $\del_{\beta}\cdot\Psi$, which does not consider the index $\beta$ as a tensorial index. 
\end{proof}

}

%---------------------------------------------------------------------------------------------------------------------------------------------

\subsection{ Energy for the Dirac equation in curved spacetime} 
\label{section===43}

{ 

\paragraph{Energy identity.} 

We continue to proceed in a fully gauge-independent manner, and introduce the vector field 
\bel{equa-jd394}
V[\Psi] := \la \Psi,g^{\alpha\beta} \del_\alpha \cdot\Psi\ra_{\ourD} \del_{\beta}.
\ee
which is quadratic in terms of a general spinor field $\Psi$ and plays now the role of an energy  flux for the Dirac operator. Observe that $V[\Psi]$ is a real-valued vector, thanks to the anti-symmetry of Clifford multiplication and the properties of the Dirac form. Alternatively, in an orthonormal frame, we can write  
\bel{equa-jd395}
V[\Psi] = \la \Psi, \eta^{ij}e_j\cdot\Psi\ra_{\ourD} e_i,
\ee
which is independent of the choice of tetrad. Indeed, if $(f_0, f_1, f_2, f_3)$ denotes any other orthonormal frame, satisfying the transformation rule $f_i = T_i^je_j$ for some $T \in \mathrm{SO}_{3,1}^+$, we have 
$$
\aligned
\la \Psi, \eta^{ij}f_j\cdot\Psi\ra_{\ourD} f_i 
& = \eta^{ij} \la \Psi,f_j\cdot\Psi \ra_{\ourD} f_i 
= \eta^{ij} \la \Psi, T_j^{j'}e_{j'} \cdot\Psi\ra_{\ourD} T_i^{i'}e_{i'} \\
& = \eta^{ij}T_i^{i'}T_j^{j'} \la \Psi, e_{j'} \cdot\Psi\ra_{\ourD} e_{i'} 
= \eta^{i'j'} \la \Psi, e_{j'} \cdot\Psi\ra_{\ourD} e_{i'}.
\endaligned
$$
Hence, $V[\Psi]$ in \eqref{equa-jd395} is a globally well-defined geometric vector field, which coincides with the gauge-independent expression \eqref{equa-jd394}.

\begin{proposition}
\label{propo-JDK02}
For any sufficiently regular section $\Psi$ on $\ourS(\mathcal{M})$, one has 
\begin{equation}
\label{equa-4222}
\mathrm{div} (V[\Psi]) = \big\la \opDirac \Psi , \Psi \big\ra_{\ourD} + \big\la \Psi, \opDirac \Psi \big\ra_{\ourD}
= 2 \, \Re\big(\opDirac\Psi,\Psi\big).
\end{equation}
Hence, the divergence vanishes identically when $\Psi$ satisfies the (homogeneous) Dirac equation. Furthermore, the vector field $V[\Psi]$ is timelike and, under suitable normalization, future-oriented. 
\end{proposition}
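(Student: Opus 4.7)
The plan is to treat the two statements in turn. For the divergence identity, I would work in abstract index notation with $V^{\beta}=\la\Psi,g^{\alpha\beta}\del_{\alpha}\cdot\Psi\ra_{\ourD}$ and expand $\nabla_{\beta}V^{\beta}$ via the three structural tools assembled in Section~\ref{section===41}: the compatibility of the spinor connection with the Dirac form,
\[
\nabla_{\beta}\la\Psi,\Phi\ra_{\ourD}=\la\nabla_{\beta}\Psi,\Phi\ra_{\ourD}+\la\Psi,\nabla_{\beta}\Phi\ra_{\ourD};
\]
the metric compatibility $\nabla_{\beta}g^{\alpha\beta}=0$; and Lemma~\ref{lem1-24-feb-2025}, which allows passing $\nabla_{\beta}$ through Clifford multiplication as $\nabla_{\beta}(\del_{\alpha}\cdot\Psi)=\del_{\alpha}\cdot\nabla_{\beta}\Psi$. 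Combining these, the expansion of $\nabla_{\beta}V^{\beta}$ reduces to $\la\nabla_{\beta}\Psi,g^{\alpha\beta}\del_{\alpha}\cdot\Psi\ra_{\ourD}+\la\Psi,g^{\alpha\beta}\del_{\alpha}\cdot\nabla_{\beta}\Psi\ra_{\ourD}$. The second summand is $\la\Psi,\opDirac\Psi\ra_{\ourD}$ by definition of the Dirac operator, and the first summand is transformed into $\la\opDirac\Psi,\Psi\ra_{\ourD}$ by invoking the Clifford-symmetry of the pairing recalled in Section~\ref{section=N17}, which permits commuting the Clifford factor $g^{\alpha\beta}\del_{\alpha}\cdot$ across the two slots of $\la\cdot,\cdot\ra_{\ourD}$. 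The equality with $2\,\mathrm{Re}(\opDirac\Psi,\Psi)_{\ourD}$ then follows from the Hermitian property of the pairing.

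For the timelike, future-oriented character of $V[\Psi]$, I would fix an arbitrary point $p\in\Mcal$ and evaluate $V[\Psi]$ in a local orthonormal frame $\{e_{0},e_{1},e_{2},e_{3}\}$ with $e_{0}$ future timelike, using the frame representation \eqref{equa-jd395}. The key observation is that the sesquilinear form $(\Phi,\Phi'):=\pm\la\Phi,e_{0}\cdot\Phi'\ra_{\ourD}$ (with sign fixed by the conventions of Section~\ref{section=N17}) is a positive-definite Hermitian inner product on the spinor space at $p$, a consequence of the algebraic identity $e_{0}\cdot e_{0}=-\mathrm{Id}$ in $\mathrm{Cl}(3,1)$ combined with the Clifford-symmetry of $\la\cdot,\cdot\ra_{\ourD}$. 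With respect to this adapted inner product, Clifford multiplication by each spacelike basis vector $e_{a}$ is Hermitian of unit operator norm. Consequently $V^{0}$ equals $(\Psi,\Psi)$ up to a positive multiplicative constant, hence is strictly positive whenever $\Psi\neq 0$, while the spatial components satisfy $|V^{a}|\leq (\Psi,\Psi)$ by the spectral bound for Hermitian operators of unit norm. Summing in the frame yields $g(V[\Psi],V[\Psi])=-(V^{0})^{2}+\sum_{a}(V^{a})^{2}<0$, and the positivity of $V^{0}$ ensures future orientation.

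The divergence identity is essentially a bookkeeping exercise once the three compatibility rules are invoked, and I expect no genuine obstacle there. The main difficulty lies in the causal character: the positive-definiteness of the adapted form $(\cdot,\cdot)$ and the operator-norm bound for spacelike Clifford multiplication must be established in a fully gauge-invariant manner, without fixing any particular matrix representation of the Clifford algebra. This is precisely where the principal-fiber-bundle and representation-theoretic framework developed in Section~\ref{section=N17} must be exploited; the cleanest route is to reduce the question to the universal identities $e_{0}\cdot e_{0}=-\mathrm{Id}$ and $e_{a}\cdot e_{a}=+\mathrm{Id}$ in $\mathrm{Cl}(3,1)$ together with the axiomatic properties of the Dirac pairing, leaving the choice of tetrad free.
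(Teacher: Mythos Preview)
Your divergence computation is correct and matches the paper's argument: expand $\nabla_{\beta}V^{\beta}$, use metric compatibility together with Lemma~\ref{lem1-24-feb-2025} to pass the derivative through the Clifford factor, then use the symmetry $\la X\cdot\Phi,\Psi\ra_{\ourD}=\la\Phi,X\cdot\Psi\ra_{\ourD}$ of the Dirac form to recognise both summands as $\la\opDirac\Psi,\Psi\ra_{\ourD}$ and $\la\Psi,\opDirac\Psi\ra_{\ourD}$.

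Your argument for the causal character has a genuine gap. From the individual bounds $|V^{a}|\leq (\Psi,\Psi)$ for each $a=1,2,3$ you obtain only $\sum_{a}(V^{a})^{2}\leq 3(V^{0})^{2}$, which does \emph{not} yield $g(V,V)=-(V^{0})^{2}+\sum_{a}(V^{a})^{2}\leq 0$. (There is also a minor imprecision: with respect to the positive form $(\cdot,\cdot)=\la\cdot,e_{0}\cdot\cdot\ra_{\ourD}$, Clifford multiplication by a spacelike $e_{a}$ is skew-adjoint, not Hermitian; the Hermitian operator with eigenvalues $\pm1$ that you actually need is $e_{0}\cdot e_{a}\cdot$.) Your route can be repaired by bounding the full spatial magnitude at once: pick the unit spatial direction $u$ of $V$ and observe that $e_{0}\cdot u\cdot$ is Hermitian with square the identity, hence $|\text{spatial part of }V|\leq |V^{0}|$.

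The paper avoids this summation issue entirely by a different, shorter argument. For an \emph{arbitrary} future-directed unit timelike $\vec{n}$ it computes directly
\[
g(V[\Psi],\vec{n})=\la\Psi,\vec{n}\cdot\Psi\ra_{\ourD},
\]
and then, choosing a tetrad with $e_{0}=\vec{n}$, evaluates this in coordinates as $\psi^{\dagger}\gamma_{0}\gamma_{0}\psi=\psi^{\dagger}\psi\geq 0$. Since the pairing of $V[\Psi]$ with every future-timelike direction has a fixed sign, $V[\Psi]$ is causal of definite orientation. This bypasses the component-by-component analysis; the tetrad is used only at the final algebraic step, so the gauge-invariance concern you raise does not arise.
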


\paragraph{Derivation of the energy identity.} 

Computing the divergence of $V[\Psi]$ using the Leibniz rule for the spinorial covariant derivative, we find 
\be
\aligned
\mathrm{div} (V[\Psi]) 
& = \nabla_{\beta}(g^{\alpha\beta} \la\Psi,\del_\alpha \cdot\Psi \ra_{\ourD}) 
 =  g^{\alpha\beta} \big(\nabla_{\beta} \la\Psi,\del_\alpha \cdot\Psi\ra_{\ourD} \big)
\\
& = g^{\alpha\beta} \del_{\beta} \la\Psi,\del_\alpha \cdot\Psi\ra_{\ourD} 
- g^{\alpha\beta} \la \Psi,\nabla_{\beta} \del_\alpha \cdot \Psi\ra_{\ourD}
\\ 
& =  \la g^{\alpha\beta} \del_\alpha \cdot\nabla_{\beta} \Psi,\Psi \ra_{\ourD} 
+ \la \Psi,g^{\alpha\beta} \del_\alpha \cdot \nabla_{\beta} \Psi\ra_{\ourD}
= 2 \, \Re\big(\opDirac\Psi,\Psi\big).
\endaligned 
\ee 
\bse
Alternatively, this calculation can be presented in a frame, as follows: 
\be
\aligned
\text{div}(V[\Psi]) & = \eta^{ij} \Lcal_{e_i} \left( \la \Psi, e_j \cdot \Psi \ra_{\ourD} \right) \\
& = \eta^{ij} \left( \la \nabla_{e_i} \Psi, e_j \cdot \Psi \ra_{\ourD} + \la \Psi, e_j \cdot \nabla_{e_i} \Psi \ra_{\ourD} \right).
\endaligned
\ee
We use the anti-symmetry property $\la e_i \cdot \Psi, \Phi \ra_{\ourD} = - \la \Psi, e_i \cdot \Phi \ra_{\ourD}$ and, by combining terms, 
\be
\aligned
\text{div}(V[\Psi]) & = \eta^{ij} \left( \la \nabla_{e_i} \Psi, e_j \cdot \Psi \ra_{\ourD} + \la \Psi, e_j \cdot \nabla_{e_i} \Psi \ra_{\ourD} \right) \\
& = \eta^{ij} \mathcal{L}_{e_i} \bigl( \la \Psi, e_j \cdot \Psi \ra_{\ourD} \big)
- \eta^{ij} \la \Psi, (\nabla_{e_i} e_j) \cdot \Psi \ra_{\ourD}.
\endaligned
\ee
Recalling the definition of the Dirac operator $\opDirac \Psi = \eta^{ij} e_i \cdot \nabla_{e_j} \Psi$ in the frame $e_i$, we arrive at \eqref{equa-4222}. 
\ese
%
 
%---------------------------------------------- 

\paragraph{Calculus in coordinates.}

It is convenient to present the timelike property in coordinates. 
Let us pick a particular tetrad $e = \{e_i\}$. The spinorial covariant derivative can be explicitly expressed via the Levi-Civita connection form associated with $\{e_i\}$ (and, for further details, cf.~for instance~\cite[Proposition 6.10.9]{Hamilton-2017})\footnote{Our expression here is slightly different from the one in~\cite[Proposition 6.10.9]{Hamilton-2017}, but in fact is equivalent to it. Observe that $\omega_{ij} = - \omega_{ji}$, and
$
\omega_{ij}(X)\gamma^i\gamma^j = \frac{1}{2}\omega_{ij}(X)(\gamma^i\gamma^j- \gamma^j\gamma^i) = \omega_{ij}(X)\gamma^{ij}.
$
}:
\begin{equation} \label{eq3-27-feb-2025}
\nabla_X\psi := \diff\psi(X) + \frac{1}{4} \omega_{ij}(X)\gamma^i\gamma^j\psi, 
\end{equation}
where
\be
\omega_{ij}(X)\eta^{ik}e_k = \nabla_X(e_i).
\ee
We then prove that the energy flux $V[\Psi]$ is timelike, as follows. Namely, let $\vec{n}$ be any future-oriented, unit, timelike vector field, together with an adapted tetrad $e = \{e_0,e_{\ih}\}$ such that $e_0 = \vec{n}$. Let $\psi$ be the coordinate of $\Psi$ in this basis $e$. Then, in any local coordinate chart, we have $\vec{n} = n^{\gamma}\del_{\gamma}$ and 
\be
g(V[\Psi],\vec{n}) = \la \Psi, g^{\alpha\beta}\del_{\alpha}\cdot\Psi\ra_{\ourD}g(\del_{\beta},\vec{n}) = \la \Psi, n^{\alpha}\del_{\alpha}\cdot\Psi\ra_{\ourD} = \la \Psi,\vec{n}\cdot\Psi\ra_{\ourD}.
\ee
We express this equation in coordinates in the basis $e$ and find
\be\label{eq2-02-mai-2025}
g(V[\Psi],\vec{n}) = \la \gamma_0\psi, \gamma_0\psi\ra = (\gamma_0\psi)^\dag(\gamma_0\psi)\geq 0,
\ee
which shows that $V[\Psi]$ is timelike and future-oriented.

}
%===============================================================================

\section{Dirac commutators} 
\label{section=N3}

\subsection{ Notation for the curvature}
\label{section===51}

{

\paragraph{An identity between the Riemann and the Ricci curvatures.}

We now introduce and study curvature-related operators acting on spinor fields and, by recalling the properties of the Riemann and Ricci curvature tensors, we derive explicit and useful curvature-spinor identities involving Clifford multiplication. The antisymmetry properties of the Clifford algebra and the Bianchi identities come together, and the following technical identities provide fundamental calculus rules that are essential in the subsequent analysis. 

The curvature of a Lorentzian manifold was defined in \eqref{eq1-31-aout-2025}. More precisely, the components of the Riemann curvature tensor $\Riem$ are denoted by by \( R_{\alpha\beta\nu\mu} \) (following the convention of \cite{YCB}) with 
\be
R_{\alpha\beta\ \mu}^{\ \ \ \nu}\del_{\nu} = R(\del_{\alpha},\del_{\beta})\del_{\mu}, 
\ee
while the components of the Ricci curvature $\Ric$ are denoted by \( R_{\alpha\beta} \). As usual, the Ricci tensor is obtained by contracting the Riemann tensor and, specifically, 
\begin{equation}
\aligned
R_{\alpha\beta} & = g^{\mu\nu} R_{\mu \alpha\nu \beta}. 
\endaligned
\end{equation}
As usual, we raise and lower indices with the metric, so for instance we have 
\be
R_{\gamma}{}^{\alpha\mu\nu} = g^{\alpha\alpha'} g^{\mu\mu'} g^{\nu\nu'} R_{\gamma\alpha'\mu'\nu'}.
\ee
We recall the antisymmetry in the first two indices $R_{\alpha\beta\gamma\delta} = -R_{\beta\alpha\gamma\delta}$, the antisymmetry in the last two indices $R_{\alpha\beta\gamma\delta} = -R_{\alpha\beta\delta\gamma}$ and the symmetry under interchange of pairs
$R_{\alpha\beta\gamma\delta} = R_{\gamma\delta\alpha\beta}$. For convenience in the presentation, we often prefer the notation $R_{\gamma\alpha'\mu'\nu'}$. Let us now consider identities involving spinor fields and the Clifford product. The following identity is central in all computations involving spinor fields on curved manifolds.

\begin{lemma}
\label{lemma-51}
For any spinor field $\Psi$, the Riemann curvature and the Ricci curvature operators are related via the identity:
\begin{equation}\label{eq3-22-feb-2025}
g^{\mu\nu} g^{\alpha\gamma} g^{\eta\beta} R_{\delta\beta\gamma\nu} \del_\alpha \cdot \del_\mu \cdot \del_{\eta} \cdot \Psi = - 2 \, g^{\alpha\gamma} R_{\delta\gamma} \del_\alpha \cdot \Psi.
\end{equation}
\end{lemma}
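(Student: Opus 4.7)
The plan is to convert the triple Clifford product into a canonical ordering using the Clifford anticommutation relation, and then to use the first Bianchi identity to close a linear equation for the remaining canonically-ordered triple. First I would raise Clifford indices so that the contractions with the Riemann tensor become transparent: writing $\del^\alpha := g^{\alpha\alpha'}\del_{\alpha'}$, the left-hand side reads
\begin{equation}
R_{\delta\beta\gamma\nu}\,\del^\gamma\cdot\del^\nu\cdot\del^\beta\cdot\Psi.
\end{equation}

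Next I would use the anticommutation relation $\del^\mu\cdot\del^\nu+\del^\nu\cdot\del^\mu=-2g^{\mu\nu}$, together with Lemma~\ref{lem1-24-feb-2025} (the covariant constancy of Clifford multiplication that we just used to treat $\del_\alpha$ as genuine Clifford generators), to bring the Clifford triple $\del^\gamma\cdot\del^\nu\cdot\del^\beta$ into the canonical ordering $\del^\beta\cdot\del^\gamma\cdot\del^\nu$. A direct computation gives
\begin{equation}
\del^\gamma\cdot\del^\nu\cdot\del^\beta
=\del^\beta\cdot\del^\gamma\cdot\del^\nu
+2g^{\gamma\beta}\del^\nu-2g^{\nu\beta}\del^\gamma.
\end{equation}
Contracting with $R_{\delta\beta\gamma\nu}$ and using the symmetries of the Riemann tensor---namely $R_{\delta\beta\gamma\nu}g^{\gamma\beta}=-R_{\delta\nu}$ and $R_{\delta\beta\gamma\nu}g^{\nu\beta}=R_{\delta\gamma}$ (both of which follow from pair-exchange and first-two/last-two antisymmetry, with the Ricci tensor being the trace over positions $1$ and $3$), while $R_{\delta\beta\gamma\nu}g^{\gamma\nu}=0$---one obtains
\begin{equation}
R_{\delta\beta\gamma\nu}\,\del^\gamma\cdot\del^\nu\cdot\del^\beta
=R_{\delta\beta\gamma\nu}\,\del^\beta\cdot\del^\gamma\cdot\del^\nu
-4R_{\delta}{}^{\nu}\del_\nu.
\end{equation}

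It then remains to compute the canonically ordered sum $R_{\delta\beta\gamma\nu}\,\del^\beta\cdot\del^\gamma\cdot\del^\nu$, and this is where the first Bianchi identity $R_{\delta\beta\gamma\nu}+R_{\delta\gamma\nu\beta}+R_{\delta\nu\beta\gamma}=0$ enters. Writing the canonical sum as minus the other two cyclic terms, renaming dummy indices, and then using the same Clifford-reordering trick to bring each permuted Clifford triple back to the canonical order, I obtain a closed linear equation
\begin{equation}
R_{\delta\beta\gamma\nu}\,\del^\beta\cdot\del^\gamma\cdot\del^\nu
=-2R_{\delta\beta\gamma\nu}\,\del^\beta\cdot\del^\gamma\cdot\del^\nu
+6R_{\delta}{}^{\nu}\del_\nu,
\end{equation}
so that $R_{\delta\beta\gamma\nu}\,\del^\beta\cdot\del^\gamma\cdot\del^\nu=2R_{\delta}{}^{\nu}\del_\nu$. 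Substituting into the previous display yields $-2R_{\delta}{}^{\nu}\del_\nu\cdot\Psi=-2g^{\alpha\gamma}R_{\delta\gamma}\del_\alpha\cdot\Psi$, which is \eqref{eq3-22-feb-2025}.

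The routine part is the Clifford reordering; the conceptual content is entirely packaged in the first Bianchi identity, which ensures that the fully antisymmetric part of the Clifford triple (in the three indices contracted against Riemann) contributes nothing. The main obstacle is purely bookkeeping: tracking the sign conventions (with signature $(-,+,+,+)$ the Clifford relation carries a minus sign), correctly relabeling dummies after applying Bianchi, and consistently identifying which trace of Riemann produces Ricci and which produces its opposite. A cleaner conceptual restatement is that the antisymmetrization $R_{\delta[\beta\gamma\nu]}=0$ implies that only the Clifford contractions---which reduce by two the order of the Clifford product---survive, and these contractions evaluate to $\pm R_{\delta}{}^{\nu}\del_\nu$, yielding the factor $-2$ by direct accounting.
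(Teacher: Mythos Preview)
Your proof is correct and uses the same two ingredients as the paper's argument: the Clifford anticommutation relation to reorder the triple product, and the first Bianchi identity to close a linear equation. The only organizational difference is that you first compute the canonically ordered quantity $R_{\delta\beta\gamma\nu}\,\del^\beta\cdot\del^\gamma\cdot\del^\nu=2R_\delta{}^\nu\del_\nu$ (which is precisely the content of the paper's subsequent Corollary) and then deduce the Lemma, whereas the paper proves the Lemma directly and derives the Corollary from it; and your invocation of Lemma~\ref{lem1-24-feb-2025} is unnecessary since the identity is pointwise-algebraic with no covariant derivatives involved.
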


\begin{proof} We are going rely on the {\sl anti-commutation property} enjoyed by the Clifford product  
\begin{equation} \label{clifford-anticomm}
\del_\mu \cdot\del_{\eta}+\del_{\eta} \cdot\del_\mu = - 2g_{\mu\eta}, 
\end{equation}
which implies
\begin{equation} \label{clifford-anticomm2}
\aligned
\del_\alpha \cdot \del_\mu \cdot\del_{\eta} 
& = - \del_\mu \cdot \del_\alpha \cdot\del_{\eta} - 2g_{\mu\alpha} \del_\eta
\\
& = \del_\mu \cdot \del_{\eta} \cdot\del_\alpha + 2 g_{\eta \alpha} \del_\mu - 2g_{\mu\alpha} \del_\eta.
\endaligned
\end{equation} 
We will also use the {\sl first Bianchi identity} satisfied by the Riemann curvature, that is,  
\be 
R_{\delta\beta\gamma\nu} = - R_{\delta\gamma\nu\beta} - R_{\delta\nu\beta\gamma} .
\ee
We then rely on these identities after contracting with $g^{\mu\nu} g^{\alpha\gamma} g^{\eta\beta} $ and seek to identify the two key terms 
\be
\aligned
A_\delta 
& := g^{\mu\nu} g^{\alpha\gamma} g^{\eta\beta} R_{\delta\beta\gamma\nu} \del_\alpha \cdot \del_\mu \cdot \del_{\eta} \cdot \Psi,
\qquad
B_\delta := g^{\alpha\gamma} R_{\delta\gamma} \del_\alpha \cdot \Psi.
\endaligned
\ee
Thanks to \eqref{clifford-anticomm2}, we find 
\be
\aligned
A_\delta 
& = g^{\mu\nu} g^{\alpha\gamma} g^{\eta\beta} R_{\delta\beta\gamma\nu} \del_\alpha \cdot \del_\mu \cdot \del_{\eta} \cdot \Psi
\\
& = g^{\mu\nu} g^{\alpha\gamma} g^{\eta\beta} R_{\delta\beta\gamma\nu}
\del_\mu \cdot \del_{\eta} \cdot\del_\alpha   \cdot \Psi
+  
g^{\mu\nu} g^{\alpha\gamma} g^{\eta\beta} R_{\delta\beta\gamma\nu}
\Big( 2 g_{\eta \alpha} \del_\mu - 2g_{\mu\alpha} \del_\eta.
\Big) \cdot \Psi
\\
& 
=: C_\delta + D_\delta. 
\endaligned
\ee
in which 
\be
\aligned
C_\delta 
& :=  g^{\mu\nu} g^{\alpha\gamma} g^{\eta\beta} R_{\delta\beta\gamma\nu}
\del_\mu \cdot \del_{\eta} \cdot\del_\alpha   \cdot \Psi
\\
& =  g^{\alpha\gamma} g^{\eta\beta} g^{\mu\nu} \Big(
- R_{\delta\beta\gamma\nu} - R_{\delta\gamma\nu\beta} \Big) 
\del_\alpha \cdot \del_\mu \cdot\del_{\eta}   \cdot \Psi
\\ 
& = - A_\delta -
g^{\alpha\gamma} g^{\eta\beta} g^{\mu\nu} R_{\delta\gamma\nu\beta} 
\del_\alpha \cdot \del_\mu \cdot\del_{\eta}   \cdot \Psi
=: - A_\delta - E_\delta, 
\endaligned
\ee
after transforming $\mu, \eta, \alpha \to \alpha, \mu, \eta$.  
and using 
$R_{\delta\nu\beta\gamma}
= - R_{\delta\beta\gamma\nu} - R_{\delta\gamma\nu\beta}.
$
We also observe that, again with the anti-commutation property for spinors, 
\be
\aligned
E_\delta 
& := 
g^{\alpha\gamma} g^{\eta\beta} g^{\mu\nu} R_{\delta\gamma\nu\beta} 
\del_\alpha \cdot \del_\mu \cdot\del_{\eta}   \cdot \Psi
\\
& = -  g^{\alpha\gamma} g^{\eta\beta} g^{\mu\nu} R_{\delta\gamma\nu\beta} 
\del_\mu \cdot \del_\alpha \cdot\del_{\eta}   \cdot \Psi
- 2  g^{\alpha\gamma} g^{\eta\beta} g^{\mu\nu} g_{\alpha\mu}  R_{\delta\gamma\nu\beta}  \del_{\eta}   \cdot \Psi
= C_\delta + 2  B_\delta. 
\endaligned
\ee
It remains to compute 
\be
\aligned
D_{\delta}:& = g^{\mu\nu} g^{\alpha\gamma} g^{\eta\beta} R_{\delta\beta\gamma\nu}
\Big( 2 g_{\eta \alpha} \del_\mu - 2g_{\mu\alpha} \del_\eta.
\Big) \cdot \Psi
\\
& = 2g^{\mu\nu}g^{\alpha\gamma}R_{\delta\alpha\gamma\nu}\del_{\mu} 
- 2g^{\alpha\gamma}g^{\beta\beta}R_{\delta\beta\gamma\alpha}\del_{\eta} = -2B_{\delta}.
\endaligned
\ee

We have established that $A_\delta = C_\delta + D_\delta$ and $C_\delta = - A_\delta - E_\delta$ with 
$E_\delta = C_\delta + 2  B_\delta$ 
and 
$D_\delta = -2 \, B_\delta$, 
so that 
$A_\delta = - 2 B_\delta$. 
\end{proof}

%-----------------------------------------------------

We conclude as follows. 

\begin{corollary}[See also Lemma~3.2 of \cite{ChenXuantao}]
The following identity holds:
\begin{equation}
R_{\gamma}^{\ \alpha\mu\nu}\del_{\alpha}\cdot\del_{\mu}\cdot\del_{\nu} = 2g^{\alpha\gamma}R_{\delta\gamma}\del_{\alpha}\cdot\Psi.
\end{equation}
\end{corollary}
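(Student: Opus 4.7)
The corollary is essentially a compact restatement of Lemma~\ref{lemma-51}, obtained by absorbing the three inverse metric factors into the Riemann tensor (i.e. raising the appropriate indices) and renaming dummies. My strategy is therefore not to redo any Clifford or Bianchi manipulations, but simply to perform the algebraic bookkeeping needed to move from the expression $g^{\mu\nu} g^{\alpha\gamma} g^{\eta\beta} R_{\delta\beta\gamma\nu}\,\del_\alpha\cdot\del_\mu\cdot\del_\eta\cdot\Psi$ in \eqref{eq3-22-feb-2025} to the compact form $R_\delta{}^{\alpha\mu\nu}\,\del_\alpha\cdot\del_\mu\cdot\del_\nu\cdot\Psi$ that appears on the left-hand side of the corollary (with the implicit ``$\cdot\Psi$'' and the implicit reading of the free index as $\delta$ rather than $\gamma$).

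\textbf{Step 1: raise the three indices.} First I would contract $g^{\eta\beta}$ with $R_{\delta\beta\gamma\nu}$ to raise the second slot, giving $R_\delta{}^{\eta}{}_{\gamma\nu}$; then contract $g^{\alpha\gamma}$ to raise the third slot, giving $R_\delta{}^{\eta\alpha}{}_\nu$; and finally contract $g^{\mu\nu}$ to raise the fourth slot, giving $R_\delta{}^{\eta\alpha\mu}$. Substituting this into the left-hand side of \eqref{eq3-22-feb-2025} gives $R_\delta{}^{\eta\alpha\mu}\,\del_\alpha\cdot\del_\mu\cdot\del_\eta\cdot\Psi$, and renaming the dummy $\eta \mapsto \nu$ yields $R_\delta{}^{\nu\alpha\mu}\,\del_\alpha\cdot\del_\mu\cdot\del_\nu\cdot\Psi$.

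\textbf{Step 2: reorder the upper indices.} To match the form $R_\delta{}^{\alpha\mu\nu}\,\del_\alpha\cdot\del_\mu\cdot\del_\nu\cdot\Psi$ stated in the corollary, I would use the elementary symmetries of the Riemann tensor recalled in Section~\ref{section===51}: the pair-swap symmetry $R_{\alpha\beta\gamma\delta}=R_{\gamma\delta\alpha\beta}$, combined with the antisymmetries in the first and the last pair, allows one to permute the three upper indices of $R_\delta{}^{\nu\alpha\mu}$ into $R_\delta{}^{\alpha\mu\nu}$, tracking the sign carefully. Together with the factor $-2$ already produced by Lemma~\ref{lemma-51}, this delivers the identity stated in the corollary.

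\textbf{Expected difficulty.} There is no analytic content beyond Lemma~\ref{lemma-51}: the Clifford anticommutation and the first Bianchi identity have already been used in its proof. The only real work is the index gymnastics in Step 2, since a careless application of the symmetries would flip the sign. Thus the main (and only) obstacle is to make sure the sign convention on the right-hand side, together with the raising/lowering pattern used to define $R_\delta{}^{\alpha\mu\nu}$ from $R_{\delta\beta\gamma\nu}$, is consistent with the conventions fixed at the beginning of Section~\ref{section===51}; once this is checked, the corollary follows immediately by substitution into \eqref{eq3-22-feb-2025}.
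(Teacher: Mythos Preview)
Your Step~1 is correct and yields
$R_{\delta}^{\ \nu\alpha\mu}\,\del_\alpha\!\cdot\!\del_\mu\!\cdot\!\del_\nu\!\cdot\!\Psi = -2\,g^{\alpha\gamma}R_{\delta\gamma}\,\del_\alpha\!\cdot\!\Psi$
directly from Lemma~\ref{lemma-51}. The gap is entirely in Step~2: the Riemann symmetries do \emph{not} allow you to convert $R_{\delta}^{\ \nu\alpha\mu}$ into $\pm R_{\delta}^{\ \alpha\mu\nu}$. The group of index permutations generated by the two antisymmetries and the pair swap has order~$8$ (the dihedral group of a square), and the only element in it that fixes the first slot $\delta$ is the transposition of the last two slots. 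The $3$-cycle $(\nu,\alpha,\mu)\mapsto(\alpha,\mu,\nu)$ has order~$3$ and cannot lie in a group of order~$8$; equivalently, for a generic curvature $R_{\delta\nu\alpha\mu}$ and $R_{\delta\alpha\mu\nu}$ are independent, related only by the three-term first Bianchi identity and not by a two-term relation. The sign flip from $-2$ in Lemma~\ref{lemma-51} to $+2$ in the corollary already hints that this is not a mere relabeling.

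What is missing is precisely the Clifford anticommutation \eqref{clifford-anticomm}, which is how the paper proceeds. Writing $A_\delta$ for the left side of \eqref{eq3-22-feb-2025} and $Y_\delta:=R_{\delta}^{\ \alpha\mu\nu}\,\del_\alpha\!\cdot\!\del_\mu\!\cdot\!\del_\nu\!\cdot\!\Psi$ for the left side of the corollary, one applies $\del_i\!\cdot\!\del_j+\del_j\!\cdot\!\del_i=-2g_{ij}$ twice (once to the last two Clifford factors of $A_\delta$, then to the first two factors of the resulting expression) and contracts to Ricci, obtaining $A_\delta - Y_\delta = -4\,g^{\alpha\gamma}R_{\delta\gamma}\,\del_\alpha\!\cdot\!\Psi$. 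Combining this with Lemma~\ref{lemma-51} ($A_\delta=-2\,g^{\alpha\gamma}R_{\delta\gamma}\,\del_\alpha\!\cdot\!\Psi$) gives $Y_\delta=+2\,g^{\alpha\gamma}R_{\delta\gamma}\,\del_\alpha\!\cdot\!\Psi$. Your proposal skips exactly these Clifford steps and therefore does not close.
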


\begin{proof} 
\bse
We have 
\be
\aligned
g^{\mu\nu}g^{\alpha\gamma}g^{\eta\beta}R_{\delta\beta\gamma\nu}(\del_{\alpha}\cdot\del_{\mu}\cdot\del_{\eta} + \del_{\alpha}\cdot\del_{\eta}\cdot\del_{\mu}) 
& =-2g^{\mu\nu}g^{\alpha\gamma}g^{\eta\beta}R_{\delta\beta\gamma\nu}g_{\mu\eta}\del_{\alpha}
\\
& =-2g^{\nu\beta}R_{\delta\beta\gamma\nu}g^{\alpha\gamma}\del_{\alpha} = -2g^{\alpha\gamma}R_{\delta\gamma}\del_{\alpha}.
\endaligned
\ee
and 
\be
\aligned
g^{\mu\nu}g^{\alpha\gamma}g^{\eta\beta}R_{\delta\beta\gamma\nu}
(\del_{\alpha}\cdot\del_{\eta}\cdot\del_{\mu} + \del_{\eta}\cdot\del_{\alpha}\cdot\del_{\mu}) 
& =-2g^{\mu\nu}g^{\alpha\gamma}g^{\eta\beta}R_{\delta\beta\gamma\nu}g_{\alpha\eta}\del_{\mu}
\\
& =-2g^{\mu\nu}g^{\gamma\beta}R_{\delta\beta\gamma\nu}\del_{\mu} 
= 2g^{\mu\nu}R_{\delta\nu}\del_{\mu}.
\endaligned
\ee
Therefore we find 
\be
g^{\mu\nu}g^{\alpha\gamma}g^{\eta\beta}R_{\delta\beta\gamma\nu}(\del_{\alpha}\cdot\del_{\mu}\cdot\del_{\eta} - \del_{\eta}\cdot\del_{\alpha}\cdot\del_{\mu}) = -4g^{\alpha\gamma}R_{\delta\gamma}\del_{\alpha}.
\ee
This leads us to the desired result.
\ese
\end{proof}

}

\subsection{ Hessian, curvature, Dirac, and wave operators}
\label{section===52}

{

\paragraph{Hessian operator.}

We now turn our attention to various operators and identities that involve the curvature. First of all, the Hessian of a spinor field $\Psi$ is defined by (by \eqref{equa46--}) 
\be
\Hessian[\Psi](X,Y) := 
\nabla_X\nabla_Y \Psi 
= \nabla_X(\nabla_Y \Psi) - \nabla_{\nabla_XY} \Psi. 
\ee
We easily check the linearity property, for any sufficiently regular function $f$ and vector fields $X,Y$: 
\begin{equation}
\Hessian[\Psi](X,fY) = f \, \Hessian[\Psi](X,Y).
\end{equation}
Indeed, this follows by computing 
\be
\aligned
\Hessian[\Psi](X,fY) 
& = \nabla_X(\nabla_{fY} \Psi) - \nabla_{\nabla_X(fY)} \Psi 
\\
& = f\nabla_X(\nabla_Y \Psi) + \mathrm{d}f(X)\nabla_Y \Psi - \nabla_{\mathrm{d}f(X)\nabla_XY + f\nabla_XY} \Psi
\\
& =  f\nabla_X(\nabla_Y \Psi) - f\nabla_{\nabla_XY} \Psi.
\endaligned
\ee
In fact, $\Hessian[\Psi]$ is a 2-form field taking values in the spinor bundle.  

%-----------------------------------------

\paragraph{Curvature operator associated with spinor fields.}
The \emph{curvature operator acting on spinor fields} $\Psi$ is given by the anti-symmetric part of the Hessian operator:
\bel{equa--512}
\aligned
\opCurve_{XY} \Psi :& =  \nabla_X\nabla_Y \Psi - \nabla_Y\nabla_X\Psi = \nabla_X(\nabla_Y\Psi) - \nabla_Y(\nabla_X\Psi) - \nabla_{[X,Y]} \Psi 
\\
& =  [\nabla_X,\nabla_Y]\Psi - \nabla_{[X,Y]} \Psi.
\endaligned
\ee
While the following result is standard, we provide an outline of its proof (using our notation) in Section~\ref{section=N21}. 
\begin{lemma} \label{lem1-01-march-2025}
The Riemann curvature $\Riem$ suitably applied to a spinor can be expressed in terms of the curvature operator $\opCurve_{XY}$ acting on a spinor via the identity 
\begin{equation}\label{equa-26juillet2025a} 
\opCurve_{XY} \Psi = \frac{1}{4}g^{\alpha\mu}g^{\beta\nu} \, \big(R(X,Y)\del_{\mu},\del_{\nu}\big)_g\del_\alpha \cdot\del_{\beta} \cdot\Psi . 
\end{equation}  
in which $X,Y$ are arbitrary vector fields, $\Psi$ is an arbitrary spinor field, and $\del_\alpha$ is an arbitrary coordinate basis. 
\end{lemma}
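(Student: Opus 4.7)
Both sides of \eqref{equa-26juillet2025a} are $C^{\infty}(\Mcal)$-linear in $X$ and $Y$, and both depend linearly on $\Psi$, so the identity is tensorial and it is enough to verify it in a convenient gauge. The plan is to fix a local orthonormal frame $\{e_a\}$, use the explicit formula \eqref{eq3-27-feb-2025} for the spinor connection, compute the commutator $[\nabla_X,\nabla_Y]\psi-\nabla_{[X,Y]}\psi$ directly on the coordinate representation $\psi$ of $\Psi$, and then translate the resulting frame expression back to the coordinate basis.

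\textbf{Commutator in a tetrad.} I set $A(X):=\tfrac{1}{4}\omega_{ij}(X)\gamma^i\gamma^j$, so that \eqref{eq3-27-feb-2025} reads $\nabla_X\psi=\diff\psi(X)+A(X)\psi$. A direct expansion of $\nabla_X(\nabla_Y\psi)$ via the Leibniz rule, followed by antisymmetrization and subtraction of $\nabla_{[X,Y]}\psi$, yields the standard curvature of any linear connection,
\begin{equation*}
\opCurve_{XY}\psi = \bigl(\diff A(X,Y)+[A(X),A(Y)]\bigr)\psi,
\end{equation*}
since the first-order terms in $X\psi$ and $Y\psi$ cancel and the contribution of $A([X,Y])$ is absorbed into the exterior derivative $\diff A$. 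Substituting the explicit form of $A$ and using the Clifford identity
\begin{equation*}
[\gamma^i\gamma^j,\gamma^k\gamma^l] = 2\bigl(\eta^{jk}\gamma^i\gamma^l - \eta^{jl}\gamma^i\gamma^k - \eta^{ik}\gamma^j\gamma^l + \eta^{il}\gamma^j\gamma^k\bigr),
\end{equation*}
which follows from the anti-commutation rule \eqref{clifford-anticomm}, the quadratic term $[A(X),A(Y)]$ reshapes into $\tfrac14(\omega_{ik}\wedge\omega^k{}_j)(X,Y)\gamma^i\gamma^j\psi$. Combining with $\diff A$ produces exactly
\begin{equation*}
\opCurve_{XY}\psi = \tfrac14\,\Omega_{ij}(X,Y)\,\gamma^i\gamma^j\psi,
\end{equation*}
where $\Omega=\diff\omega+\omega\wedge\omega$ is the Cartan curvature $2$-form of the Levi--Civita connection written in the tetrad $\{e_a\}$.

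\textbf{Translation to the coordinate basis.} By the standard identification coming from the Cartan structure equation, $\Omega_{ij}(X,Y)=\bigl(R(X,Y)e_j,e_i\bigr)_g$. Returning to $\Psi$ via the intrinsic correspondence $e_a\cdot\Psi\leftrightarrow\gamma^a\psi$, expanding each tetrad vector as $e_a=E_a^\alpha\del_\alpha$ with vielbein $E_a^\alpha$, and invoking $g^{\alpha\beta}=\eta^{ab}E_a^\alpha E_b^\beta$, I arrive at the coordinate-basis expression \eqref{equa-26juillet2025a}; the antisymmetry of the Riemann tensor in its last two indices, together with the absorption of the symmetric part of $\del_\alpha\cdot\del_\beta$ into the metric via \eqref{clifford-anticomm}, ensures that the two possible orderings of the inner product give the same contracted quantity. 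The step I anticipate as the main obstacle is the Clifford algebra bookkeeping: one must verify that the signs and index pairings produced by $[A(X),A(Y)]$ match exactly the $\omega\wedge\omega$ term in the Cartan structure equation. Once this algebraic identification is secured, the passage from frame to coordinate indices is routine linear algebra.
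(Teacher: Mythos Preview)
Your approach is correct and is essentially identical to the paper's own proof in Section~\ref{section=N21}: fix a tetrad, write $\nabla_X\psi=\diff\psi(X)+A(X)\psi$ with $A=\tfrac14\omega_{ij}\gamma^i\gamma^j$, compute the curvature as $\diff A+[A,A]$, reduce the commutator term via Clifford algebra to $\tfrac14(\omega\wedge\omega)_{ij}\gamma^i\gamma^j$, and identify the result with the Riemann tensor through the Cartan structure equation before passing to the coordinate frame.

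One small warning on the step you yourself flagged: with the paper's convention \eqref{clifford-anticomm}, namely $\{\gamma^i,\gamma^j\}=-2\eta^{ij}$, the Clifford commutator you wrote carries an overall sign error (your displayed identity is the one for the opposite convention $\{\gamma^i,\gamma^j\}=+2\eta^{ij}$); the paper's version reads $[\gamma^\alpha\gamma^\beta,\gamma^\alpha\gamma^\mu]=2\eta^{\alpha\alpha}\gamma^\beta\gamma^\mu$ for distinct indices, and correcting this sign is exactly what makes the $[A,A]$ term match $\omega\wedge\omega$ rather than $-\omega\wedge\omega$.
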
 

We first establish the following result. 

\begin{lemma}
The Ricci curvature $\Ric$ suitably applied to a spinor can be expressed in term of the curvature operator $\opCurve_{X\beta}$ via the identity 
\begin{equation} \label{eq2-24-feb-2025}
\frac{1}{2}g^{\alpha\beta} \Ric (X,\del_{\beta})\del_\alpha \cdot\Psi
= -g^{\alpha\beta} \del_\alpha \cdot \opCurve_{X\beta} \Psi.
\end{equation}
\end{lemma}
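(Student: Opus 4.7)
The plan is to reduce the identity to an algebraic manipulation of triple Clifford products of coordinate derivatives, and then invoke Lemma~\ref{lemma-51}. First, it suffices to prove the identity for $X=\del_{\delta}$, since both sides are $C^{\infty}(\Mcal)$-linear in $X$. Starting from Lemma~\ref{lem1-01-march-2025}, we may write
\begin{equation}
\opCurve_{\del_\delta,\del_\beta}\Psi
= \frac{1}{4} g^{\mu\mu'}g^{\nu\nu'}\bigl(R(\del_\delta,\del_\beta)\del_{\mu'},\del_{\nu'}\bigr)_g\,\del_\mu\cdot\del_\nu\cdot\Psi
= \frac{1}{4} g^{\mu\mu'}g^{\nu\nu'} R_{\delta\beta\mu'\nu'}\,\del_\mu\cdot\del_\nu\cdot\Psi,
\end{equation}
so that contracting with $g^{\alpha\beta}\del_\alpha$ on the left yields
\begin{equation}
g^{\alpha\beta}\del_\alpha\cdot\opCurve_{\del_\delta,\del_\beta}\Psi
= \tfrac{1}{4}\,g^{\alpha\beta}g^{\mu\mu'}g^{\nu\nu'} R_{\delta\beta\mu'\nu'}\,\del_\alpha\cdot\del_\mu\cdot\del_\nu\cdot\Psi.
\end{equation}
The right-hand side is now a fully contracted triple Clifford product weighted by the Riemann tensor, and the goal is to reduce it to a single Clifford factor weighted by the Ricci tensor.

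The key step is to recognize the above expression as precisely the one treated by Lemma~\ref{lemma-51}, after appropriately relabeling the dummy indices. Indeed, comparing with the statement of that Lemma, the three Clifford indices $(\alpha,\mu,\nu)$ here pair (via the inverse metric) with the Riemann positions $(2,3,4)$ respectively, which is a cyclic permutation of the pairing $(3,4,2)$ appearing in Lemma~\ref{lemma-51}. To match the two, I would first use the Clifford anticommutation relation \eqref{clifford-anticomm} twice to express
\begin{equation}
\del_\alpha\cdot\del_\mu\cdot\del_\nu
= \del_\mu\cdot\del_\nu\cdot\del_\alpha
+ 2 g_{\alpha\nu}\del_\mu - 2 g_{\alpha\mu}\del_\nu.
\end{equation}
The leading contribution $g^{\alpha\beta}g^{\mu\mu'}g^{\nu\nu'}R_{\delta\beta\mu'\nu'}\,\del_\mu\cdot\del_\nu\cdot\del_\alpha\cdot\Psi$ then falls exactly into the form of Lemma~\ref{lemma-51}, and gives $-2 g^{\mu\mu'}R_{\delta\mu'}\del_\mu\cdot\Psi$.

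It remains to handle the two lower-order correction terms produced by the rearrangement. For each of them, one contraction of the metric with the Riemann tensor produces a Ricci tensor. More precisely, using the antisymmetry $R_{\alpha\beta\gamma\delta}=-R_{\beta\alpha\gamma\delta}=-R_{\alpha\beta\delta\gamma}$, the pair-swap symmetry $R_{\alpha\beta\gamma\delta}=R_{\gamma\delta\alpha\beta}$, and (where needed) the first Bianchi identity, I would show that both correction terms reduce to $2\,g^{\mu\mu'}R_{\delta\mu'}\del_\mu\cdot\Psi$ in absolute value but enter with opposite signs (one coming from $+2g_{\alpha\nu}\del_\mu$ and the other from $-2g_{\alpha\mu}\del_\nu$), so that their contributions cancel exactly. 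Collecting everything gives
\begin{equation}
g^{\alpha\beta}\del_\alpha\cdot\opCurve_{\del_\delta,\del_\beta}\Psi
= \tfrac{1}{4}\bigl(-2\,g^{\alpha\beta}R_{\delta\beta}\del_\alpha\cdot\Psi\bigr)
= -\tfrac{1}{2}\,g^{\alpha\beta}R_{\delta\beta}\del_\alpha\cdot\Psi,
\end{equation}
which is precisely \eqref{eq2-24-feb-2025} with $X=\del_\delta$. The main source of potential error is bookkeeping with the order of the Clifford indices and the antisymmetries of the Riemann tensor when reducing the two correction terms; this is the step I would carry out most carefully, since a sign slip there would destroy the cancellation.
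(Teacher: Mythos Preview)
Your strategy---reduce $g^{\alpha\beta}\del_\alpha\cdot\opCurve_{\del_\delta,\del_\beta}\Psi$ to a triple Clifford product weighted by the Riemann tensor and then invoke Lemma~\ref{lemma-51}---is exactly the paper's approach. But two concrete errors in your execution would make the argument fail as written.

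First, an index slip: in the paper's convention one has $R(\del_\alpha,\del_\beta)\del_\mu = R_{\alpha\beta\ \mu}^{\ \ \ \nu}\del_\nu$ with the lowered index placed in the \emph{third} slot, so $(R(\del_\delta,\del_\beta)\del_{\mu'},\del_{\nu'})_g = R_{\delta\beta\nu'\mu'}$, not $R_{\delta\beta\mu'\nu'}$; the paper's own proof accordingly writes $R_{\delta\beta\nu\mu}$. Second, and more seriously, your claim that the two correction terms from the Clifford rearrangement cancel is false. Carrying out the contractions with your $R_{\delta\beta\mu'\nu'}$: the $+2g_{\alpha\nu}\del_\mu$ piece contracts Riemann positions $2$ and $4$, giving $g^{\beta\nu'}R_{\delta\beta\mu'\nu'}=R_{\delta\mu'}$; the $-2g_{\alpha\mu}\del_\nu$ piece contracts positions $2$ and $3$, giving $g^{\beta\mu'}R_{\delta\beta\mu'\nu'}=-R_{\delta\nu'}$, and the prefactor $-2$ makes this also $+R_{\delta\nu'}$. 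Both corrections therefore equal $+\tfrac12\, g^{\mu\mu'}R_{\delta\mu'}\del_{\mu}$ and \emph{add}, so your three pieces sum to $+\tfrac12$ rather than $-\tfrac12$. The overall sign discrepancy is precisely the one introduced by the first error, so if you fix both you do arrive at the right answer---but not by the cancellation mechanism you describe.

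The paper's proof is shorter: with the correct component $R_{\delta\beta\nu\mu}$, no Clifford cycling is needed, and Lemma~\ref{lemma-51} (together with one use of the antisymmetry $R_{\delta\beta\nu\mu}=-R_{\delta\beta\mu\nu}$, which is essentially the Corollary following that lemma) applies directly. Your cyclic detour was forced only by the initial index slip.
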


\begin{proof} 
\bse
From 
$$
\aligned
\opCurve_{X\beta} \Psi & =  \frac{1}{4}g^{\mu\mu'}g^{\nu\nu'}
(R(X,\del_{\beta})\del_{\mu},\del_{\nu})_g\del_{\mu'} \cdot\del_{\nu'} \cdot\Psi
\\
& = \frac{1}{4}g^{\mu\mu'}g^{\nu\nu'}X^{\delta}  R_{\delta\beta\nu\mu}
\del_{\mu'} \cdot\del_{\nu'} \cdot\Psi, 
\endaligned
$$
we deduce that 
$$
\aligned
g^{\alpha\beta} \del_\alpha \cdot \opCurve_{X\beta} \Psi 
& =  \frac{1}{4}g^{\alpha\beta}g^{\mu\mu'}g^{\nu\nu'}X^{\delta} R_{\delta\beta\nu\mu} \del_\alpha \cdot\del_{\mu'} \cdot\del_{\nu'} \cdot\Psi
\\
& = - \frac{1}{2}g^{\alpha\beta}X^{\delta} R_{\delta\beta}\del_\alpha \cdot\Psi
=  - \frac{1}{2}g^{\alpha\beta}X^{\delta} \Ric (\del_\delta,\del_\beta)\del_\alpha \cdot\Psi, 
\endaligned
$$
where, for the last expression, we applied Lemma~\ref{lemma-51}.
\ese
\end{proof} 

%-----------------------------------------------------

\paragraph{Wave (Laplace) operator.}

The trace of Hessian is nothing but the wave operator, given by 
\begin{equation} \label{eq1-27-feb-2025}
\Box_g\Psi := \text{trace} \bigl( \Hessian[\Psi] \bigr)
= g^{\alpha\beta} \nabla_{\alpha} \nabla_{\beta} \Psi. 
\end{equation}
Observe that, in any \underline{wave coordinate chart} in which the contracted Christoffel symbols $\Gamma^\alpha = 0$ vanish, we find the simpler expression
$$
\Box_g\Psi = g^{\mu\nu} \nabla_{\mu} (\nabla_\nu \Psi) =: \widetilde{\Box}_g\Psi, 
$$
which {\sl formally} coincides with the expression associated with a scalar field $\phi$, namely
\be
\Box_g\phi = g^{\mu\nu} \del_\mu \del_{\nu} \phi =: \widetilde{\Box}_g\phi.
\ee

%---------------------------------------------------------------------------

\paragraph{Dirac operator.}

Let us recall the notation for the Dirac operator  
\begin{equation}
\opDirac \Psi := g^{\mu\nu} \del_\mu \cdot \nabla_\nu \Psi, 
\end{equation}
and observe now its square $\opDirac^2$ is related to the wave operator, as follows. 

\begin{lemma}\label{lem1-21-feb-2025}
For any spinor field $\Psi$ one has 
\begin{equation}
\opDirac^2\Psi = - \Box_g\Psi - \frac{1}{4}R\Psi, 
\end{equation}
where $R$ denotes the scalar curvature.
\end{lemma}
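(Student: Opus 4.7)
The plan is to compute $\opDirac^2\Psi$ by iterating the definition of $\opDirac$ and then exploiting the Clifford anticommutation $\del_{\mu} \cdot \del_{\alpha} + \del_{\alpha} \cdot \del_{\mu} = -2 g_{\mu\alpha}$ to split the double Clifford product into a symmetric part (giving the wave operator) and an antisymmetric part (giving the curvature operator). First I would write
$$\opDirac^2 \Psi = g^{\mu\nu} g^{\alpha\beta}\, \del_{\mu} \cdot \del_{\alpha} \cdot \nabla_{\nu}\nabla_{\beta}\Psi,$$
which follows from the metric compatibility of $\nabla$ (so the $g^{\alpha\beta}$ can be pulled out) together with Lemma~\ref{lem1-24-feb-2025}, which lets Clifford multiplication by $\del_{\alpha}$ pass through the spinorial covariant derivative.

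Next, using $\del_{\mu} \cdot \del_{\alpha} = -g_{\mu\alpha} + \tfrac{1}{2}(\del_{\mu}\cdot\del_{\alpha} - \del_{\alpha}\cdot\del_{\mu})$, the symmetric piece contracts to $-g^{\mu\nu}g^{\alpha\beta}g_{\mu\alpha}\nabla_{\nu}\nabla_{\beta}\Psi = -g^{\nu\beta}\nabla_{\nu}\nabla_{\beta}\Psi = -\Box_g\Psi$, which accounts for the first term in the claim. For the antisymmetric piece, the coefficient $g^{\mu\nu}g^{\alpha\beta}$ is invariant under the double swap $(\mu,\nu)\leftrightarrow(\alpha,\beta)$, so relabeling turns the expression into the antisymmetric combination of $\nabla_{\nu}\nabla_{\beta}$ in its indices; because $[\del_{\nu},\del_{\beta}]=0$ in coordinates, this combination is exactly $\opCurve_{\nu\beta}\Psi$ as defined in \eqref{equa--512}. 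Up to this point the identity reads
$$\opDirac^2\Psi = -\Box_g\Psi + \frac{1}{2}\, g^{\mu\nu}g^{\alpha\beta}\, \del_{\mu}\cdot\del_{\alpha}\cdot \opCurve_{\nu\beta}\Psi.$$

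The main remaining task is to reduce the curvature term to $-\tfrac{1}{4}R\Psi$. Here I would invoke Lemma~\ref{lem1-01-march-2025}, which expresses $\opCurve_{\nu\beta}\Psi$ as a quadratic Clifford element weighted by Riemann components $R_{\nu\beta\tau'\sigma'}$. This yields a product of four Clifford generators contracted with the metric and the Riemann tensor. Two successive applications of the anticommutation relation let me commute one $\del_{\alpha}$ through the pair $\del_{\rho}\cdot\del_{\sigma}$ inside $\opCurve_{\nu\beta}$, producing first a three-Clifford-product contraction of the Riemann tensor and, by Lemma~\ref{lemma-51}, reducing it to a one-Clifford-product contraction of the Ricci tensor (this is precisely where the algebraic Bianchi identity is consumed). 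A final anticommutation of $\del_{\mu}$ against the remaining single $\del_{\alpha}$, combined with the symmetry $R_{\alpha\gamma} = R_{\gamma\alpha}$ of the Ricci tensor and one more metric contraction, collapses everything to a scalar proportional to $g^{\mu\nu}R_{\mu\nu} = R$ acting on $\Psi$. The numerical bookkeeping yields exactly the coefficient $-\tfrac{1}{4}$.

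The hardest part is this last reduction: one must carefully track the Bianchi-based rearrangements so that the correct numerical factor emerges, since naive expansion of four gamma products generates several intermediate terms whose cancellations rely on the precise form of Lemma~\ref{lemma-51} and its corollary. Because every step above uses only tensorial and spinorial identities that are invariant under change of tetrad, the resulting identity $\opDirac^2 \Psi = -\Box_g\Psi - \tfrac{1}{4}R\Psi$ is automatically gauge-independent, as required by the framework of Section~\ref{section===41}.
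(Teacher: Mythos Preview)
Your proposal is correct and follows essentially the same route as the paper's proof: write $\opDirac^2\Psi = g^{\alpha\beta}g^{\mu\nu}\del_{\alpha}\cdot\del_{\mu}\cdot\nabla_{\beta}\nabla_{\nu}\Psi$ via Lemma~\ref{lem1-24-feb-2025}, split the Clifford product into its symmetric part (yielding $-\Box_g\Psi$) and its antisymmetric part (yielding the curvature operator $\opCurve_{\nu\beta}$), and then reduce the curvature term to $-\tfrac14 R\,\Psi$. The only difference is cosmetic: where you propose to expand $\opCurve_{\nu\beta}\Psi$ via Lemma~\ref{lem1-01-march-2025} and then invoke Lemma~\ref{lemma-51} to collapse the triple Clifford--Riemann contraction to a single Clifford--Ricci contraction, the paper instead cites the already-packaged identity \eqref{eq2-24-feb-2025} (whose proof is precisely the combination of those two lemmas) and then performs the final $\del_{\mu}\cdot\del_{\alpha}$ contraction using the symmetry of the Ricci tensor.
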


\begin{proof} Observe that from Lemma~\ref{lem1-24-feb-2025}, that is, 
${\nabla_{\beta}(\del_\mu \cdot\nabla_\nu \Psi) = \del_{\mu}\cdot\nabla_{\beta}\nabla_{\nu}\Psi}$.
Next, recalling that the Levi-Civita connection satisfies $\nabla g = 0$, we have 
$$
\aligned
&g^{\alpha\beta} \del_\alpha \cdot \nabla_{\beta}(g^{\mu\nu} \del_\mu \cdot\nabla_\nu \Psi) 
\\
& =  g^{\alpha\beta}g^{\mu\nu} \del_\alpha \cdot\nabla_{\beta}(\del_\mu \cdot\nabla_\nu \Psi) 
= g^{\alpha\beta}g^{\mu\nu} \del_\alpha \cdot\del_\mu \cdot\nabla_{\beta} \nabla_\nu \Psi
\\
& =  \frac{1}{2}g^{\alpha\beta}g^{\mu\nu}(\del_\alpha \cdot\del_{\mu} + \del_{\mu}\cdot\del_{\alpha})\nabla_{\beta} \nabla_\nu \Psi 
+ \frac{1}{2}g^{\alpha\beta}g^{\mu\nu}\del_{\mu}\cdot\del_\alpha \cdot\big(\nabla_\nu \nabla_{\beta} \Psi - \nabla_{\beta} \nabla_\nu \Psi\big)
\\
& =  -g^{\beta\nu} \nabla_{\beta} \nabla_\nu \Psi 
+ \frac{1}{2}g^{\alpha\beta}g^{\mu\nu} \del_\mu \cdot\del_\alpha \cdot [\nabla_{\nu},\nabla_{\beta}] \Psi
\\
& = -g^{\beta\nu} \nabla_{\beta} \nabla_\nu \Psi 
+  \frac{1}{2}g^{\mu\nu}\del_{\mu}\cdot \big(g^{\alpha\beta}\del_{\alpha}\cdot\opCurve(\del_{\nu},\del_{\beta})\Psi\big)
\\
& = -g^{\beta\nu} \nabla_{\beta} \nabla_\nu \Psi 
-  \frac{1}{4}g^{\mu\nu}g^{\alpha\beta}\del_{\mu}\cdot \Ric(\del_{\nu},\del_{\beta})\del_{\alpha}\cdot\Psi
\\
& = -g^{\beta\nu} \nabla_{\beta} \nabla_\nu \Psi 
-  \frac{1}{4}R\Psi.
\endaligned
$$ 
in which we used the notation \eqref{equa--512} and applied \eqref{eq2-24-feb-2025}. 
\end{proof}
}

%-----------------------------------------------------------------------------------------------------------------------------------------

\subsection{ Commutation relations for the Dirac operator}
\label{section===53}

{

\paragraph{Commutation relation for the Dirac operator.}
We  now turn our attention to computing commutators with the Dirac operator and writing consequences of interest in wave coordinates, since these will be the coordinates we will work with.
In order to reach better commutation relations, it is convenient to introduce a \emph{modified notion of derivative}, as follows. 

\begin{proposition} \label{prop1-22-feb-2025}
For any spinor field $\Psi$ and any vector field $X$, with the notation 
\bel{equa-511m}
\widehat{X} \Psi := \nabla_X\Psi 
- \frac{1}{4}g^{\alpha\gamma} \del_\alpha \cdot\nabla_{\gamma}X\cdot\Psi, 
\ee
one has 
\begin{equation} \label{eq2-26-feb-2025}
[\widehat{X},\opDirac]\Psi = 
- \frac{1}{2} \pi[X]^{\alpha\beta} \del_\alpha \cdot\nabla_{\beta} \Psi -\frac{1}{4}g^{\alpha\beta} \Ric (X,\del_{\alpha})\del_{\beta} \cdot \Psi 
- \frac{1}{4} \Box_gX\cdot\Psi,
\end{equation}
where $\pi[X]$ is the \emph{deformation tensor of $X$}, defined as 
\be
\pi[X]_{\alpha\beta} = (\Lcal_Xg)_{\alpha\beta}.
\ee
\end{proposition}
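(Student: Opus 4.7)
The plan is to split the commutator as
\[
[\widehat{X},\opDirac]\Psi = [\nabla_X,\opDirac]\Psi - \tfrac{1}{4}\,\bigl[\,g^{\alpha\gamma}\del_\alpha\cdot\nabla_\gamma X\cdot\,,\,\opDirac\,\bigr]\Psi,
\]
and to evaluate each piece separately. The modification term $A_X\Psi := -\tfrac{1}{4}g^{\alpha\gamma}\del_\alpha\cdot\nabla_\gamma X\cdot\Psi$ is engineered so that the antisymmetric part of $\nabla X$ cancels between the two commutators, leaving only the symmetric deformation tensor $\pi[X]$; simultaneously, the second covariant derivative of $X$ hidden inside $A_X$ supplies the $\Box_g X$ contribution.

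For the first piece I would expand $\opDirac\Psi = g^{\mu\nu}\del_\mu\cdot\nabla_\nu\Psi$ and apply $\nabla_X$ using the compatibility of the Clifford product with the spin connection, together with $\nabla g = 0$. The difference $\nabla_X\opDirac\Psi - \opDirac\nabla_X\Psi$ then decomposes into three contributions: a curvature term $g^{\mu\nu}\del_\mu\cdot\opCurve_{X,\del_\nu}\Psi$ coming from the Hessian skew-symmetrisation, a Lie-bracket term $g^{\mu\nu}\del_\mu\cdot\nabla_{[X,\del_\nu]}\Psi$ with $[X,\del_\nu] = \nabla_X\del_\nu - \nabla_\nu X$ (torsion-freeness), and a basis-rotation term $g^{\mu\nu}\nabla_X\del_\mu\cdot\nabla_\nu\Psi$. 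By Lemma~\ref{lem1-01-march-2025} applied via \eqref{eq2-24-feb-2025}, the curvature piece collapses to $-\tfrac12 g^{\alpha\beta}\Ric(X,\del_\alpha)\del_\beta\cdot\Psi$.

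For the second piece, I would push $\opDirac$ across the Clifford factors of $A_X$, again using Lemma~\ref{lem1-24-feb-2025} and the compatibility of Clifford multiplication with $\nabla$. This produces two structurally distinct terms: (i) a term where $\opDirac$ differentiates the Clifford element, yielding $\tfrac{1}{4}g^{\alpha\gamma}g^{\mu\nu}\del_\mu\cdot\del_\alpha\cdot\nabla_\nu\nabla_\gamma X\cdot\Psi$ (plus Christoffel corrections that must be tracked carefully against the other terms), and (ii) a term where $\opDirac$ acts on $\Psi$ after using the anticommutation relation \eqref{clifford-anticomm} to reorder $\del_\mu\cdot(\del_\alpha\cdot\nabla_\gamma X)$. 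In (i) I would split $\del_\mu\cdot\del_\alpha$ into its symmetric and antisymmetric parts via \eqref{clifford-anticomm2}: the symmetric (metric) piece collapses because $g^{\alpha\gamma}g^{\mu\nu}g_{\mu\alpha}\nabla_\nu\nabla_\gamma X = g^{\nu\gamma}\nabla_\nu\nabla_\gamma X = \Box_g X$, producing exactly the term $-\tfrac14\Box_g X\cdot\Psi$; the antisymmetric piece, after using $\nabla_\nu\nabla_\gamma X - \nabla_\gamma\nabla_\nu X = R(\del_\nu,\del_\gamma)X$ and a Bianchi-type reduction parallel to Lemma~\ref{lemma-51}, contributes an extra Ricci term that combines with the one extracted from $[\nabla_X,\opDirac]\Psi$ to give the final coefficient $-\tfrac14 g^{\alpha\beta}\Ric(X,\del_\alpha)\del_\beta\cdot\Psi$.

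The decisive assembly step is the cancellation of all remaining first-derivative terms in $X$ beyond the symmetric deformation tensor. All the leftover contributions --- the basis-rotation $g^{\mu\nu}\nabla_X\del_\mu\cdot\nabla_\nu\Psi$, the Lie-bracket $g^{\mu\nu}\del_\mu\cdot\nabla_{[X,\del_\nu]}\Psi$, the Clifford commutator arising from (ii), and the Christoffel corrections from (i) --- are linear in $\nabla X$ with no second derivatives. Decomposing $\nabla_\alpha X_\beta = \tfrac12\pi[X]_{\alpha\beta} + \tfrac12(\nabla_\alpha X_\beta - \nabla_\beta X_\alpha)$, I would verify that the antisymmetric (two-form) piece cancels identically between the two commutators --- this is precisely the role of the coefficient $-\tfrac14$ in the definition of $A_X$ --- while the symmetric piece consolidates into $-\tfrac12 \pi[X]^{\alpha\beta}\del_\alpha\cdot\nabla_\beta\Psi$. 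The main obstacle will be this cancellation: one must carefully track every index placement and sign when reordering Clifford factors via \eqref{clifford-anticomm2}, and check that the Christoffel tails hidden inside the Hessian definition \eqref{eq1-24-feb-2025} recombine correctly with the basis-rotation contributions $\nabla_X\del_\mu$. Once this cancellation is verified, collecting the three surviving contributions yields exactly \eqref{eq2-26-feb-2025}.
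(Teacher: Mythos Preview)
Your proposal is correct and follows essentially the same route as the paper: split $[\widehat{X},\opDirac]$ into $[\nabla_X,\opDirac]$ and $\tfrac14[\opDirac,\,g^{\alpha\gamma}\del_\alpha\cdot\nabla_\gamma X\cdot\,]$, extract the Ricci term from each piece via \eqref{eq2-24-feb-2025} and Lemma~\ref{lemma-51}, extract $\Box_g X$ from the second piece, and then observe that the remaining first-order terms in $\nabla X$ combine into $-\tfrac12\pi[X]^{\alpha\beta}\del_\alpha\cdot\nabla_\beta\Psi$ because the antisymmetric part of $\nabla X$ cancels.

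One organizational difference is worth noting. You carry a ``basis-rotation term'' $g^{\mu\nu}\nabla_X\del_\mu\cdot\nabla_\nu\Psi$ and associated Christoffel tails through the first commutator, to be cancelled at the assembly stage. The paper avoids these entirely by invoking Lemma~\ref{lem1-24-feb-2025} (covariant constancy of the Clifford multiplication by $\del_\alpha$, treating $\alpha$ as a tensorial index): this yields $\nabla_X(g^{\alpha\beta}\del_\alpha\cdot\nabla_\beta\Psi)=g^{\alpha\beta}\del_\alpha\cdot\nabla_X\nabla_\beta\Psi$ in one step, so that $[\nabla_X,\opDirac]\Psi = -g^{\alpha\beta}\del_\alpha\cdot\nabla_{\nabla_\beta X}\Psi - \tfrac12 g^{\alpha\beta}\Ric(X,\del_\beta)\del_\alpha\cdot\Psi$ with no extraneous terms (this is \eqref{eq2-22-feb-2025}). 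The second commutator is then packaged into a separate lemma \eqref{eq6-22-feb-2025}. Your approach produces the same result but with more intermediate bookkeeping; using Lemma~\ref{lem1-24-feb-2025} from the outset would streamline your computation.
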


The proof is decomposed in several observations, as follows.
\begin{lemma} The commutator $[\nabla_X,\opDirac]\Psi$ associated with a vector field and a spinor field is given by  
\begin{equation} \label{eq2-22-feb-2025}
[\nabla_X,\opDirac]\Psi = 
- g^{\alpha\beta} \del_\alpha \cdot\nabla_{\nabla_{\beta}X} \Psi
- \frac{1}{2}g^{\alpha\beta} \Ric (X,\del_{\beta})\del_\alpha \cdot \Psi. 
\end{equation} 
\end{lemma}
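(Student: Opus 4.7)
To establish \eqref{eq2-22-feb-2025}, I would expand the commutator
\begin{equation*}
[\nabla_X,\opDirac]\Psi = \nabla_X\bigl(g^{\mu\nu}\del_\mu\cdot\nabla_\nu\Psi\bigr) - g^{\mu\nu}\del_\mu\cdot\nabla_\nu(\nabla_X\Psi)
\end{equation*}
directly in a local coordinate chart, using the Leibniz rule together with the covariant constancy of Clifford multiplication (Lemma~\ref{lem1-24-feb-2025}), which gives $\nabla_X(\del_\mu\cdot\Phi) = (\nabla_X\del_\mu)\cdot\Phi + \del_\mu\cdot\nabla_X\Phi$ for any spinor $\Phi$. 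This produces two groups of contributions: a transport-type piece $X(g^{\mu\nu})\del_\mu\cdot\nabla_\nu\Psi + g^{\mu\nu}(\nabla_X\del_\mu)\cdot\nabla_\nu\Psi$ arising from the Leibniz rule, and a second-order piece $g^{\mu\nu}\del_\mu\cdot\bigl(\nabla_X(\nabla_\nu\Psi) - \nabla_\nu(\nabla_X\Psi)\bigr)$ containing a spin-curvature operator that is still hidden.

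\textbf{Rewriting the second-order piece.} Using the definition $\nabla_X\nabla_Y\Psi = \nabla_X(\nabla_Y\Psi) - \nabla_{\nabla_XY}\Psi$ supplied by \eqref{eq1a-24-feb-2025}, together with the torsion-free identity $[X,\del_\nu] = \nabla_X\del_\nu - \nabla_\nu X$, I would rewrite the second-order factor as
\begin{equation*}
\nabla_X(\nabla_\nu\Psi) - \nabla_\nu(\nabla_X\Psi) = \opCurve_{X,\del_\nu}\Psi + \nabla_{\nabla_X\del_\nu}\Psi - \nabla_{\nabla_\nu X}\Psi.
\end{equation*}
The genuine curvature contribution $g^{\mu\nu}\del_\mu\cdot\opCurve_{X,\del_\nu}\Psi$ can then be converted into the Ricci form $-\tfrac{1}{2}g^{\mu\nu}\Ric(X,\del_\nu)\del_\mu\cdot\Psi$ by the identity \eqref{eq2-24-feb-2025}, which was itself deduced from Lemma~\ref{lemma-51}. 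The target contribution $-g^{\mu\nu}\del_\mu\cdot\nabla_{\nabla_\nu X}\Psi$ is already isolated at this stage.

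\textbf{Main obstacle.} The delicate bookkeeping step, which is where the cancellations actually happen, is to verify that the three remaining Christoffel-type contributions
\begin{equation*}
X(g^{\mu\nu})\,\del_\mu\cdot\nabla_\nu\Psi \;+\; g^{\mu\nu}(\nabla_X\del_\mu)\cdot\nabla_\nu\Psi \;+\; g^{\mu\nu}\del_\mu\cdot\nabla_{\nabla_X\del_\nu}\Psi
\end{equation*}
vanish identically. The key ingredient is the metric compatibility $\nabla g^{-1}=0$, which in coordinates reads
\begin{equation*}
X(g^{\mu\nu}) = -g^{\lambda\nu}X^\alpha\Gamma^\mu_{\alpha\lambda} - g^{\mu\lambda}X^\alpha\Gamma^\nu_{\alpha\lambda}.
\end{equation*}
Expanding $\nabla_X\del_\mu = X^\alpha\Gamma^\lambda_{\alpha\mu}\del_\lambda$ and relabeling dummy indices, the first summand on the right cancels the second Christoffel contribution, while the second summand cancels the third, leaving exactly
$-g^{\alpha\beta}\del_\alpha\cdot\nabla_{\nabla_\beta X}\Psi - \tfrac{1}{2}g^{\alpha\beta}\Ric(X,\del_\beta)\del_\alpha\cdot\Psi$,
as claimed in \eqref{eq2-22-feb-2025}. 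Since both sides of the identity are manifestly geometric, the coordinate-dependent intermediate steps yield a gauge-invariant result, consistent with the frame-independent framework developed in this section.
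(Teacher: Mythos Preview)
Your argument is correct and follows the same logical skeleton as the paper: isolate the curvature contribution $g^{\mu\nu}\del_\mu\cdot\opCurve_{X,\del_\nu}\Psi$ and convert it to the Ricci term via \eqref{eq2-24-feb-2025}, then identify the remaining $-g^{\mu\nu}\del_\mu\cdot\nabla_{\nabla_\nu X}\Psi$ piece. The difference is purely in bookkeeping. You treat $g^{\mu\nu}$ as a scalar coefficient and $\del_\mu\cdot\nabla_\nu\Psi$ as a plain spinor, which forces you to produce and then cancel three Christoffel-type terms by hand. The paper instead regards $\del_\alpha\cdot\nabla_\beta\Psi$ as a $(0,2)$ tensor-spinor, so that $\nabla_X g^{\alpha\beta}=0$ and Lemma~\ref{lem1-24-feb-2025} apply directly, yielding in one line
\[
\nabla_X(\opDirac\Psi)=g^{\alpha\beta}\del_\alpha\cdot\nabla_X\nabla_\beta\Psi,
\]
with no explicit Christoffels appearing. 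Your cancellation step is exactly the coordinate unpacking of this single identity.

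One minor citation issue: the formula $\nabla_X(\del_\mu\cdot\Phi)=(\nabla_X\del_\mu)\cdot\Phi+\del_\mu\cdot\nabla_X\Phi$ that you invoke is not Lemma~\ref{lem1-24-feb-2025} but rather the Levi-Civita compatibility of the spin connection stated in Section~\ref{section===41}. Lemma~\ref{lem1-24-feb-2025} is the tensor-index statement $\nabla_\alpha(\del_\beta\cdot\Psi)=\del_\beta\cdot\nabla_\alpha\Psi$, which already absorbs a Christoffel correction and is precisely what makes the paper's shortcut work. Your proof is valid as written, just with that attribution corrected.
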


Of course, the identity \eqref{eq2-22-feb-2025} can also be expressed in the form 
\begin{equation}
[\nabla_{\mu},\opDirac]\Psi 
= - g^{\alpha\beta} \Gamma_{\beta\mu}^{\nu} \del_\alpha \cdot\nabla_\nu \Psi 
- \frac{1}{2}g^{\alpha\beta} \Ric (\del_{\mu},\del_{\beta})\del_\alpha \cdot\Psi.
\end{equation}

\begin{proof} It suffices to compute 
$$
\aligned
\nabla_X(g^{\alpha\beta} \del_\alpha \cdot\nabla_{\beta} \Psi) 
& =  \nabla_Xg^{\alpha\beta}\del_{\alpha}\cdot\nabla_{\beta}\Psi +  g^{\alpha\beta} \nabla_X(\del_\alpha \cdot\nabla_{\beta} \Psi)
=  g^{\alpha\beta} \nabla_X(\del_\alpha \cdot\nabla_{\beta} \Psi) 
\\
& =  g^{\alpha\beta}\del_{\alpha}\cdot\nabla_{X}\nabla_{\beta}\Psi
\endaligned
$$
where we have applied the fact $\nabla_Xg = 0$ and Lemma~\ref{lem1-24-feb-2025}.
Therefore, we find 
$$
\aligned
\nabla_X(g^{\alpha\beta} \del_\alpha \cdot\nabla_{\beta} \Psi) 
& =  g^{\alpha\beta} \del_\alpha \cdot\nabla_{\beta} \nabla_X\Psi 
+g^{\alpha\beta} \del_\alpha \cdot \opCurve_{X\beta} \Psi
\\
& = g^{\alpha\beta} \del_\alpha \cdot\nabla_{\beta}(\nabla_X\Psi) 
-g^{\alpha\beta} \del_\alpha \cdot\nabla_{\nabla_{\beta}X} \Psi  
+g^{\alpha\beta} \del_\alpha \cdot \opCurve_{X\beta} \Psi
\\
& =  \opDirac(\nabla_X \Psi) 
- g^{\alpha\beta} \del_\alpha \cdot\nabla_{\nabla_{\beta}X} \Psi
+ g^{\alpha\beta} \del_\alpha \cdot \opCurve_{X\beta} \Psi.
\endaligned
$$
For the second term in the right-hand side of the identity, we apply \eqref{eq2-24-feb-2025} and obtain \eqref{eq2-22-feb-2025}. 
\end{proof}

%--------------------------- 

It remains to analyze the effect of the contribution arising in the modified derivative \eqref{equa-511m}. 

\begin{lemma}
With the notation above, one has 
\begin{equation} \label{eq6-22-feb-2025}
\aligned
& \,  [\opDirac,g^{\alpha\gamma} \del_\alpha \cdot\nabla_{\gamma}X\cdot]\Psi 
\\
& = 
2g^{\alpha\gamma} \omega_X^{\beta}(\del_{\gamma})(\del_\alpha \cdot\nabla_{\beta} \Psi - \del_{\beta} \cdot\nabla_{\alpha} \Psi) 
+ g^{\alpha\gamma} \Ric (X,\del_{\gamma})\del_\alpha \cdot\Psi
- \Box_gX \cdot\Psi,
\endaligned
\end{equation}
where
$$
\omega_X^{\beta}(\del_\gamma)\del_{\beta} = \nabla_{\gamma} X.
$$
\end{lemma}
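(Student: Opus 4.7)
I would set $Q\Psi:=g^{\alpha\gamma}\del_\alpha\cdot\nabla_\gamma X\cdot\Psi=A^{\alpha\beta}\del_\alpha\cdot\del_\beta\cdot\Psi$ with $A^{\alpha\beta}:=g^{\alpha\gamma}\omega_X^\beta(\del_\gamma)$, and compute $\opDirac(Q\Psi)$ using the compatibility of the spinor covariant derivative with Clifford multiplication together with $\nabla g=0$. Iterating Lemma~\ref{lem1-24-feb-2025} (``gamma matrices covariantly constant''), the derivative passes through the Clifford factors, and subtracting $Q\opDirac\Psi$ yields the natural split $[\opDirac,Q]\Psi=\mathrm{I}+\mathrm{II}$ with a ``first-order'' piece $\mathrm{I}:=g^{\mu\nu}A^{\alpha\beta}(\del_\mu\cdot\del_\alpha\cdot\del_\beta-\del_\alpha\cdot\del_\beta\cdot\del_\mu)\cdot\nabla_\nu\Psi$ and a ``zero-order'' piece $\mathrm{II}:=g^{\mu\nu}(\nabla_\nu A^{\alpha\beta})\,\del_\mu\cdot\del_\alpha\cdot\del_\beta\cdot\Psi$.

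For $\mathrm{I}$, iterating \eqref{clifford-anticomm2} yields the purely algebraic identity $\del_\mu\cdot\del_\alpha\cdot\del_\beta-\del_\alpha\cdot\del_\beta\cdot\del_\mu=2g_{\mu\beta}\del_\alpha-2g_{\mu\alpha}\del_\beta$. Contracting with $g^{\mu\nu}A^{\alpha\beta}$ collapses the metric traces and, after relabeling dummies, produces exactly the announced term $2g^{\alpha\gamma}\omega_X^\beta(\del_\gamma)(\del_\alpha\cdot\nabla_\beta\Psi-\del_\beta\cdot\nabla_\alpha\Psi)$, with no curvature input.

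For $\mathrm{II}$, I write $\nabla_\nu A^{\alpha\beta}=g^{\alpha\gamma}\nabla_\nu\nabla_\gamma X^\beta$ and split the Hessian $\nabla_\nu\nabla_\gamma X^\beta$ into its parts symmetric and antisymmetric in $(\nu,\gamma)$. Using $\del_\mu\cdot\del_\alpha=\tfrac12(\del_\mu\cdot\del_\alpha-\del_\alpha\cdot\del_\mu)-g_{\mu\alpha}$ from \eqref{clifford-anticomm}, the $-g_{\mu\alpha}$ trace applied to the symmetric part contracts through $g^{\mu\nu}g^{\alpha\gamma}g_{\mu\alpha}=g^{\nu\gamma}$ and yields $-g^{\nu\gamma}\nabla_\nu\nabla_\gamma X^\beta\,\del_\beta\cdot\Psi=-\Box_g X\cdot\Psi$. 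The antisymmetric-in-$(\mu,\alpha)$ Clifford factor paired with the $(\nu,\gamma)$-symmetric Hessian via $g^{\mu\nu}g^{\alpha\gamma}$ vanishes under the simultaneous relabeling $\mu\leftrightarrow\alpha,\,\nu\leftrightarrow\gamma$, which reverses the overall sign. On the coordinate basis the antisymmetric part of the Hessian equals $\tfrac12 R(\del_\nu,\del_\gamma)X$, leaving a residual triple-Clifford expression in the Riemann tensor.

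The main obstacle is reducing this residual $\tfrac12 g^{\mu\nu}g^{\alpha\gamma}R_{\nu\gamma\ \delta}^{\ \ \ \beta}X^\delta\,\del_\mu\cdot\del_\alpha\cdot\del_\beta\cdot\Psi$ to the claimed Ricci term $g^{\alpha\gamma}\Ric(X,\del_\gamma)\del_\alpha\cdot\Psi$. I would use the Riemann symmetries $R_{abcd}=R_{cdab}$ together with antisymmetry in each pair to recast it in the index pattern of Lemma~\ref{lemma-51}/\eqref{eq3-22-feb-2025}, after one application of \eqref{clifford-anticomm2} to reorder $\del_\mu\cdot\del_\alpha\cdot\del_\beta$ into the form $\del_\alpha\cdot\del_\mu\cdot\del_\eta$ appearing there. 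The correction generated by this reordering is proportional to $g^{\alpha\gamma}R_{\delta\beta\gamma\alpha}$, which vanishes by antisymmetry of $R$ in its last pair. Applying Lemma~\ref{lemma-51} then delivers exactly $g^{\alpha\gamma}\Ric(X,\del_\gamma)\del_\alpha\cdot\Psi$, and assembling $\mathrm{I}$, the $-\Box_g X\cdot\Psi$ trace, and this Ricci term produces \eqref{eq6-22-feb-2025}.
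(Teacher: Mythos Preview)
Your proof is correct and follows essentially the same route as the paper: both split $[\opDirac,Q]\Psi$ into a first-order piece handled by the Clifford identity $\del_\mu\cdot\del_\alpha\cdot\del_\beta-\del_\alpha\cdot\del_\beta\cdot\del_\mu=2g_{\mu\beta}\del_\alpha-2g_{\mu\alpha}\del_\beta$ (the paper records this as the intermediate identity \eqref{eq4-22-feb-2025}) and a zero-order piece handled by symmetrizing the Hessian of $X$ to extract $-\Box_g X\cdot\Psi$ and then reducing the residual Riemann term to Ricci via Lemma~\ref{lemma-51}. The only cosmetic difference is that the paper packages \eqref{eq4-22-feb-2025} as a standalone identity before invoking it, whereas you derive the same Clifford reduction inline.
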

\begin{proof} 
\bse
First of all, we have 
\begin{equation}
g^{\mu\nu} \del_\mu \cdot\del_\alpha \cdot\del_{\beta} \cdot \nabla_\nu \Psi 
+ g^{\mu\nu} \del_\alpha \cdot\del_\mu \cdot\del_{\beta} \cdot\nabla_\nu \Psi = -2\del_{\beta} \cdot\nabla_{\alpha} \Psi,
\end{equation}
\begin{equation}
g^{\mu\nu} \del_\alpha \cdot\del_\mu \cdot\del_{\beta} \cdot\nabla_\nu \Psi 
+ g^{\mu\nu} \del_\alpha \cdot\del_{\beta} \cdot\del_\mu \cdot\nabla_\nu \Psi = -2\del_\alpha \cdot\nabla_{\beta} \Psi.
\end{equation}
taking the difference of these two identities, we obtain:
\begin{equation}
g^{\mu\nu} \del_\mu \cdot\del_\alpha \cdot\del_{\beta} \cdot\nabla_\nu \Psi -  \del_{\alpha}\cdot\del_{\beta}\cdot\opDirac\Psi 
= 2\del_\alpha \cdot\nabla_{\beta} \Psi - 2\del_{\beta} \cdot\nabla_{\alpha} \Psi.
\end{equation}
So, by applying Lemma~\ref{lem1-24-feb-2025} along with the above identity, we find 
\begin{equation} \label{eq4-22-feb-2025}
\aligned
\opDirac(\del_\alpha \cdot\del_{\beta} \cdot\Psi)
& =  g^{\mu\nu} \del_\mu \cdot\del_\alpha \cdot\del_{\beta} \cdot\nabla_\nu \Psi 
\\
& =  \del_\alpha \cdot\del_{\beta} \cdot\opDirac\Psi + 2(\del_\alpha \cdot\nabla_{\beta} \Psi - \del_{\beta} \cdot\nabla_{\alpha} \Psi).
\endaligned
\end{equation}

According to our notation (or by Remark~\ref{remark-calculation}), we have 
\begin{equation} \label{equa-393F} 
\nabla_{\nu}(\omega_X^{\beta}(\del_{\gamma}))\del_{\beta} = \nabla_{\nu}(\omega_X^{\beta}(\del_{\gamma})\del_{\beta}) = \nabla_\nu \nabla_{\gamma}X, 
\end{equation}
and therefore  
\begin{equation}
\aligned
& 
g^{\mu\nu} \del_\mu \cdot\del_\alpha \cdot\nabla_\nu \big(g^{\alpha\gamma} \omega_X^{\beta}(\del_{\gamma})\big)\del_{\beta} \cdot\Psi
\\
& =  g^{\mu\nu}g^{\alpha\gamma}
\del_\mu \cdot\del_\alpha \cdot\nabla_\nu \nabla_{\gamma}X\cdot\Psi
\\
& = \frac{1}{2}g^{\mu\nu}g^{\alpha\gamma} \del_\mu \cdot\del_\alpha \cdot\nabla_\nu \nabla_{\gamma}X\cdot\Psi 
+ \frac{1}{2}g^{\alpha\gamma}g^{\mu\nu} \del_\alpha \cdot\del_\mu \cdot\nabla_{\gamma} \nabla_{\nu}X\cdot\Psi
\\
& = \frac{1}{2}g^{\mu\nu}g^{\alpha\gamma}
(\del_\mu \cdot\del_\alpha + \del_\alpha \cdot\del_{\mu})
\cdot\nabla_\nu \nabla_{\gamma}X\cdot\Psi
+ \frac{1}{2}g^{\mu\nu}g^{\alpha\gamma} \del_\alpha \cdot\del_\mu \cdot R(\del_{\gamma},\del_{\nu})X\cdot\Psi
\\
& =  -g^{\mu\nu} \nabla_{\mu} \nabla_{\nu}X\cdot\Psi
+ \frac{1}{2}g^{\mu\nu}g^{\alpha\gamma} \del_\alpha \cdot\del_\mu \cdot R(\del_{\gamma},\del_{\nu})X\cdot\Psi.
\endaligned
\end{equation}
For the last term, we expand
\begin{equation}
\aligned
& 
g^{\mu\nu}g^{\alpha\gamma} \del_\alpha \cdot\del_\mu \cdot R(\del_{\gamma},\del_{\nu})X
=  X^{\delta}g^{\mu\nu}g^{\alpha\gamma}
 R_{\gamma\nu\ \delta}^{\ \ \,\eta} \del_\alpha \cdot\del_\mu \cdot\del_{\eta} \cdot\Psi
\\
& =  X^{\delta}g^{\mu\nu}g^{\alpha\gamma}g^{\eta\beta}
R_{\gamma\nu\beta\delta}\del_\alpha \cdot\del_\mu \cdot\del_{\eta} \cdot\Psi
=  X^{\delta}g^{\mu\nu}g^{\alpha\gamma}g^{\eta\beta}
R_{\beta\delta\gamma\nu} \del_\alpha \cdot\del_\mu \cdot\del_{\eta} \cdot\Psi
\\
& = 2X^{\delta}g^{\alpha\gamma}R_{\delta\gamma} \del_\alpha \cdot\Psi.
\endaligned
\end{equation}
where in the last equality, we applied \eqref{eq3-22-feb-2025}. This leads us to
\begin{equation} \label{eq5-22-feb-2025}
g^{\mu\nu} \del_\mu \cdot\del_\alpha \cdot\nabla_\nu \big(g^{\alpha\gamma} \omega_X^{\beta}(\del_{\gamma})\big)\del_{\beta} \cdot\Psi 
= - \Box_gX\cdot\Psi
+ g^{\alpha\gamma} \Ric (X,\del_{\gamma})\del_\alpha \cdot\Psi.
\end{equation}
Using \eqref{eq4-22-feb-2025} and \eqref{eq5-22-feb-2025}, we conclude
$$
\aligned
& \opDirac\Big(g^{\alpha\gamma}\del_{\alpha}\cdot\nabla_{\gamma}X\cdot\Psi\Big)
- g^{\alpha\gamma} \del_\alpha \cdot\nabla_{\gamma}X \cdot\opDirac\Psi
\\
& =\opDirac\Big(\del_\alpha \cdot \big(g^{\alpha\gamma} \omega_X^{\beta}(\del_{\gamma})\big)\del_{\beta} \cdot\Psi\Big) 
- \del_\alpha \cdot\big(g^{\alpha\gamma} \omega_X^{\beta}(\del_{\gamma})\del_{\beta}\big) \cdot\opDirac\Psi
\\
& =  g^{\mu\nu}\nabla_{\nu}(g^{\alpha\gamma} \omega_X^{\beta}(\del_{\gamma})) (\del_\alpha \cdot\del_{\beta} \cdot\Psi)
+ g^{\alpha\gamma} \omega_X^{\beta}(\del_{\gamma})\opDirac(\del_\alpha \cdot\del_{\beta} \cdot\Psi)
- g^{\alpha\gamma} \omega_X^{\beta}(\del_{\gamma})\del_\alpha \cdot\del_{\beta} \cdot\opDirac\Psi
\\
& = - \Box_gX\cdot\Psi
+ g^{\alpha\gamma} \Ric (X,\del_{\gamma})\del_\alpha \cdot\Psi
+2g^{\alpha\gamma} \omega_X^{\beta}(\del_{\gamma})(\del_\alpha \cdot\nabla_{\beta} \Psi - \del_{\beta} \cdot\nabla_{\alpha} \Psi).  \qedhere
\endaligned
$$
\ese
\end{proof}

%-------------------------------------------------------------------

\begin{remark} 
\label{remark-calculation}
The derivation of \eqref{equa-393F} is straightforward, as follows. For the two-tensor
$
T(\mathrm{d}x^{\beta},\del_{\gamma}) := \omega_X^{\beta}(\del_{\gamma}), 
$
we can write 
$$
\nabla_{\nu}(\omega_X^{\beta}(\del_{\gamma})) = \nabla_{\nu}T^{\beta}_{\gamma} = \del_{\nu}T^{\beta}_{\gamma} 
+ T^{\delta}_{\gamma} \Gamma_{\nu\delta}^{\beta} 
- T^{\beta}_{\delta} \Gamma_{\nu\gamma}^{\delta}
$$
and, on the other hand,
$$
\aligned
\nabla_\nu \nabla_{\gamma}X 
& =  \nabla_{\nu}(\nabla_{\gamma}X) - \nabla_{\nabla_\nu \del_{\gamma}}X 
= \nabla_{\nu}(T_{\gamma}^{\beta} \del_{\beta}) - \nabla_{\nabla_\nu \del_{\gamma}}X
\\
& =  \del_{\nu}T^{\beta}_{\gamma} \del_{\beta} 
+ T^{\beta}_{\gamma} \Gamma_{\nu\beta}^{\delta} \del_{\delta}
- \Gamma_{\nu\gamma}^{\delta} \nabla_{\delta}X
= \big(\del_{\nu}T^{\beta}_{\gamma} + T_{\gamma}^{\delta} \Gamma_{\nu\delta}^{\beta} 
- \Gamma_{\nu\gamma}^{\delta}T_{\delta}^{\beta} \big)\del_{\beta}.
\endaligned
$$
\end{remark}

%-----------------------------------------------------------------

\begin{proof}[Proof of Proposition~\ref{prop1-22-feb-2025}]
We start from the identity
\[
-g^{\alpha\beta} \del_\alpha \cdot\nabla_{\nabla_{\beta}X} \Psi = -g^{\alpha\gamma} \omega_X^{\beta}(\del_{\gamma})\del_\alpha \cdot\nabla_{\beta} \Psi.
\]
Next, combining \eqref{eq2-22-feb-2025} and \eqref{eq6-22-feb-2025}, we obtain:
\[
\aligned
&[\nabla_X,\opDirac]\Psi - \frac{1}{4}[g^{\alpha\gamma} \del_\alpha \cdot\nabla_{\gamma}X\cdot,\opDirac]\Psi
\\
& = -\frac{1}{2}g^{\alpha\beta} \Ric (X,\del_{\beta})\del_\alpha \cdot \Psi 
- g^{\alpha\beta} \del_\alpha \cdot\nabla_{\nabla_{\beta}X} \Psi 
+ \frac{1}{2}g^{\alpha\gamma} \omega_X^{\beta}(\del_{\gamma})(\del_\alpha \cdot\nabla_{\beta} \Psi - \del_{\beta} \cdot\nabla_{\alpha} \Psi)
\\
& \quad - \frac{1}{4} \big(\Box_gX\cdot\Psi
- g^{\alpha\gamma} \Ric (X,\del_{\gamma})\del_\alpha \cdot\Psi\big).
\endaligned
\]
Rearranging terms and using the earlier identity, we arrive at 
\[
\aligned
&[\nabla_X,\opDirac]\Psi - \frac{1}{4}[g^{\alpha\gamma} \del_\alpha \cdot\nabla_{\gamma}X\cdot,\opDirac]\Psi
\\
& = -\frac{1}{4}g^{\alpha\beta} \Ric (X,\del_{\alpha})\del_{\beta} \cdot \Psi 
- \frac{1}{4} \Box_gX\cdot\Psi
\\
& \quad - g^{\alpha\gamma} \omega_X^{\beta}(\del_{\gamma})\del_\alpha \cdot\nabla_{\beta} \Psi
+ \frac{1}{2}g^{\alpha\gamma} \omega_X^{\beta}(\del_{\gamma})(\del_\alpha \cdot\nabla_{\beta} \Psi - \del_{\beta} \cdot\nabla_{\alpha} \Psi).
\endaligned
\]
For the last two terms in the right-hand side, we observe that 
\[
\aligned
- \frac{1}{2}g^{\alpha\gamma} \omega_X^{\beta}(\del_{\gamma})
(\del_\alpha \cdot\nabla_{\beta} \Psi + \del_{\beta} \cdot\nabla_{\alpha} \Psi)
& = - \frac{1}{2} \big(g^{\alpha\gamma} \omega_X^{\beta}(\del_{\gamma}) + g^{\beta\gamma} \omega_X^{\alpha}(\del_{\gamma})\big)\del_\alpha \cdot\nabla_{\beta} \Psi
\\
& = - \frac{1}{2} \pi[X]^{\alpha\beta} \del_\alpha \cdot\nabla_{\beta} \Psi.
\endaligned
\]
Thus, substituting this back, we obtain the desired result.
\end{proof}

%------------------------------------------------------- 

\paragraph{Commutators in generalized wave coordinates.}

The wave coordinate condition $\Box_g x^\alpha$ on a choice of coordinates $x^\alpha$ is equivalent to saying 
\be
g^{\alpha\beta} \nabla_{\alpha} \del_{\beta} = 0.
\ee
Here we emphasize that the above expression is \emph{not tensorial,} and pick a (family of) choice(s) of coordinates. Under this condition, the previous identities simplify significantly.  
In fact, we will need a generalized wave coordinate conditions, defined by 
\begin{equation}\label{equa-26juillet2025-b}
g^{\alpha\beta} \nabla_{\alpha} \del_{\beta} = W, 
\end{equation}
where $W=W^{\alpha}\del_{\alpha}$ is a prescribed vector field. 

\begin{lemma}
For any vector field $X$ one has 
\begin{equation}
\aligned
\Box_g X = [X,W] - g^{\alpha\beta}\Ric(X,\del_{\alpha})\del_{\beta}
+ \pi[X]^{\alpha\beta}\nabla_{\alpha}\del_{\beta} 
+ g^{\alpha\beta}\del_{\alpha}\del_{\beta}(X^{\gamma})\del_{\gamma}, 
\endaligned
\end{equation}
where $W = g^{\alpha\beta}\nabla_{\alpha}\del_{\beta}$ is defined in \eqref{equa-26juillet2025-b}. 
\end{lemma}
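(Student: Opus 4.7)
The plan is to expand $X = X^\gamma \del_\gamma$ in a local coordinate chart and compute $\Box_g X = g^{\alpha\beta} \nabla_\alpha \nabla_\beta X$ with $\nabla_\alpha \nabla_\beta X$ interpreted as the Hessian, namely $\nabla_\alpha \nabla_\beta X = \nabla_\alpha(\nabla_\beta X) - \Gamma^\delta_{\alpha\beta} \nabla_\delta X$ (cf.~\eqref{eq1-24-feb-2025}). By Leibniz, $\nabla_\beta X = \del_\beta(X^\gamma) \del_\gamma + X^\gamma \nabla_\beta \del_\gamma$; applying $\nabla_\alpha$ once more, contracting with $g^{\alpha\beta}$, and using the symmetry of $g^{\alpha\beta}$ to merge the two mixed terms, I split $\Box_g X$ into four blocks: the scalar-type wave $g^{\alpha\beta} \del_\alpha \del_\beta(X^\gamma) \del_\gamma$, a mixed block $2 g^{\alpha\beta} \del_\alpha(X^\gamma) \nabla_\beta \del_\gamma$, the principal block $X^\gamma g^{\alpha\beta} \nabla_\alpha(\nabla_\beta \del_\gamma)$, and the Hessian remainder $-g^{\alpha\beta} \Gamma^\delta_{\alpha\beta} \nabla_\delta X = -\nabla_W X$.

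The decisive step is to rewrite the principal block by the torsion-free symmetry $\nabla_\beta \del_\gamma = \nabla_\gamma \del_\beta$, followed by a Ricci-identity swap. Concretely, $g^{\alpha\beta} \nabla_\alpha(\nabla_\beta \del_\gamma) = g^{\alpha\beta} \nabla_\alpha(\nabla_\gamma \del_\beta)$, and the elementary commutator $\nabla_\alpha(\nabla_\gamma \del_\beta) - \nabla_\gamma(\nabla_\alpha \del_\beta) = R(\del_\alpha, \del_\gamma) \del_\beta$ (the Hessian $\Gamma$-corrections cancel thanks to the symmetry of the Christoffel symbols in their lower indices) allows me to bring $\nabla_\gamma$ outside the contraction. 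Since $\nabla g = 0$, I then obtain $g^{\alpha\beta} \nabla_\gamma(\nabla_\alpha \del_\beta) = \nabla_\gamma(g^{\alpha\beta} \nabla_\alpha \del_\beta) = \nabla_\gamma W$ via the definition~\eqref{equa-26juillet2025-b}, producing the key identity
\begin{equation}
g^{\alpha\beta} \nabla_\alpha(\nabla_\beta \del_\gamma) = \nabla_\gamma W + g^{\alpha\beta} R(\del_\alpha, \del_\gamma) \del_\beta.
\end{equation}
Multiplying by $X^\gamma$ and invoking the torsion-free identity $[X, W] = \nabla_X W - \nabla_W X$, the first term combines with the Hessian remainder $-\nabla_W X$ to give the Lie bracket $[X, W]$. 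A short contraction using the antisymmetry $R_{\mu\alpha\beta\rho} = -R_{\alpha\mu\beta\rho}$ in the first pair of indices together with the paper's convention $R_{\alpha\beta} = g^{\mu\nu} R_{\mu\alpha\nu\beta}$ identifies $X^\gamma g^{\alpha\beta} R(\del_\alpha, \del_\gamma) \del_\beta$ with the advertised Ricci contribution.

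It remains to match the mixed block against $\pi[X]^{\alpha\beta} \nabla_\alpha \del_\beta$. Starting from $\pi[X]^{\alpha\beta} = \nabla^\alpha X^\beta + \nabla^\beta X^\alpha$ and using the coordinate-basis symmetry $\nabla_\alpha \del_\beta = \nabla_\beta \del_\alpha$, I expand $\pi[X]^{\alpha\beta} \nabla_\alpha \del_\beta = 2 g^{\alpha\gamma} \nabla_\gamma X^\beta \nabla_\alpha \del_\beta$ and split $\nabla_\gamma X^\beta = \del_\gamma X^\beta + X^\delta \Gamma^\beta_{\gamma\delta}$. The $\del X$ portion exactly reproduces the mixed block $2 g^{\alpha\beta} \del_\alpha(X^\gamma) \nabla_\beta \del_\gamma$, while the $X^\delta \Gamma \Gamma$ portion is the only remaining ingredient needed to absorb the residual quadratic-in-Christoffel terms that emerge when one compares the abstract $\nabla_X W$ with the raw coordinate expression $X^\rho \del_\rho W^\sigma$ via the inverse-metric identity $\del_\gamma g^{\alpha\beta} = -g^{\alpha\mu} \Gamma^\beta_{\gamma\mu} - g^{\beta\mu} \Gamma^\alpha_{\gamma\mu}$. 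The main obstacle is precisely this bookkeeping step: several $\Gamma\Gamma$ combinations arise in different blocks, and they must be reorganized --- with careful tracking of symmetrizations and of the paper's curvature sign convention --- so as to isolate exactly the $\pi[X]$ structure and the Ricci term in the form stated in the lemma.
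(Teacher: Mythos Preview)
Your overall strategy---coordinate expansion, curvature commutator, then matching against $\pi[X]^{\alpha\beta}\nabla_\alpha\del_\beta$---is sound, but your displayed ``key identity'' is false as written. The step ``since $\nabla g=0$, $g^{\alpha\beta}\nabla_\gamma(\nabla_\alpha\del_\beta)=\nabla_\gamma W$'' does not hold: metric compatibility lets you commute $g^{\alpha\beta}$ through $\nabla_\gamma$ only when the object being contracted is a \emph{tensor} in the indices $\alpha,\beta$, and $\nabla_\alpha\del_\beta=\Gamma^\sigma_{\alpha\beta}\del_\sigma$ is not. The ordinary Leibniz rule on the vector field $W=g^{\alpha\beta}\nabla_\alpha\del_\beta$ gives instead
\[
g^{\alpha\beta}\nabla_\gamma(\nabla_\alpha\del_\beta)=\nabla_\gamma W-\del_\gamma(g^{\alpha\beta})\,\nabla_\alpha\del_\beta,
\]
so your key identity should read
\[
g^{\alpha\beta}\nabla_\alpha(\nabla_\beta\del_\gamma)=\nabla_\gamma W-\del_\gamma(g^{\alpha\beta})\,\nabla_\alpha\del_\beta+g^{\alpha\beta}R(\del_\alpha,\del_\gamma)\del_\beta.
\]
Your final paragraph senses that something is left over and invokes the identity $\del_\gamma g^{\alpha\beta}=-g^{\alpha\mu}\Gamma^\beta_{\gamma\mu}-g^{\beta\mu}\Gamma^\alpha_{\gamma\mu}$, but you attribute the discrepancy to the wrong place (a mismatch between $\nabla_X W$ and its coordinate form, which is not the issue). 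Once the key identity is corrected, the extra term $-X(g^{\alpha\beta})\nabla_\alpha\del_\beta$ from block~3 is \emph{exactly} your ``$X^\delta\Gamma\Gamma$'' piece $2g^{\alpha\gamma}X^\delta\Gamma^\beta_{\gamma\delta}\nabla_\alpha\del_\beta$ from the $\pi[X]$ expansion (via that very identity for $\del_\gamma g^{\alpha\beta}$ and the symmetry of $\nabla_\alpha\del_\beta$), and the two cancel cleanly---no further bookkeeping is needed.

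By contrast, the paper avoids this subtlety by not expanding $X=X^\gamma\del_\gamma$ at the outset. It uses the torsion-free swap $\nabla_\beta X=\nabla_X\del_\beta+[\del_\beta,X]$, then the curvature commutator $\nabla_\alpha(\nabla_X\del_\beta)=\nabla_X(\nabla_\alpha\del_\beta)+\nabla_{[\del_\alpha,X]}\del_\beta+R(\del_\alpha,X)\del_\beta$, so that the Leibniz term $X(g^{\alpha\beta})\nabla_\alpha\del_\beta$ appears explicitly when pulling $g^{\alpha\beta}$ inside $\nabla_X$. This term then combines directly with the bracket terms $g^{\alpha\beta}\nabla_{[\del_\alpha,X]}\del_\beta+g^{\alpha\beta}\nabla_\alpha([\del_\beta,X])$ to give $\pi[X]^{\alpha\beta}\nabla_\alpha\del_\beta+g^{\alpha\beta}\del_\alpha\del_\beta(X^\gamma)\del_\gamma$, using the Lie-derivative form $\pi[X]^{\alpha\beta}=-X(g^{\alpha\beta})+g^{\gamma\beta}\del_\gamma X^\alpha+g^{\alpha\gamma}\del_\gamma X^\beta$. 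Both routes land at the same place; the paper's is a bit more structural and sidesteps the non-tensoriality trap.
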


\begin{proof} We have 
$$
\aligned
&g^{\alpha\beta}\nabla_{\alpha}\nabla_{\beta}X 
=  g^{\alpha\beta}\nabla_{\alpha}(\nabla_{\beta}X) 
- g^{\alpha\beta}\nabla_{\nabla_{\alpha}\del_{\beta}}X
\\
& =  g^{\alpha\beta}\nabla_{\alpha}(\nabla_X\del_{\beta})
+ g^{\alpha\beta}\nabla_{\alpha}([\del_{\beta},X])
- \nabla_WX
\\
& = g^{\alpha\beta}\nabla_X(\nabla_{\alpha}\del_{\beta}) 
+ g^{\alpha\beta}\nabla_{[\del_{\alpha},X]}\del_{\beta}
 + g^{\alpha\beta}\Riem(\del_{\alpha},X)\del_{\beta}
+ g^{\alpha\beta}\nabla_{\alpha}([\del_{\beta},X])
- \nabla_WX
\\
& = \nabla_X(g^{\alpha\beta}\nabla_{\alpha}\del_{\beta})- \nabla_WX 
 +X^{\mu}g^{\alpha\beta}R_{\alpha\mu\ \beta}^{\quad\nu}\del_{\nu}
\\
& \quad -  X(g^{\alpha\beta})\nabla_{\alpha}\del_{\beta}
+ g^{\alpha\beta}\nabla_{[\del_{\alpha},X]}\del_{\beta}
+ g^{\alpha\beta}\nabla_{\alpha}([\del_{\beta},X]), 
\endaligned
$$
therefore 
$$
\aligned
&g^{\alpha\beta}\nabla_{\alpha}\nabla_{\beta}X 
\\
& =  \nabla_XW - \nabla_WX 
+ g^{\alpha\beta}X^{\gamma} R_{\alpha\gamma\ \beta}^{\quad \nu}\del_{\nu}
- X(g^{\alpha\beta})\nabla_{\alpha}\del_{\beta}
+ g^{\alpha\beta}\nabla_{[\del_{\alpha},X]}\del_{\beta}
+ g^{\alpha\beta}\nabla_{\alpha}([\del_{\beta},X])
\\
& = \nabla_XW - \nabla_WX - g^{\alpha\beta}\Ric(X,\del_{\alpha})\del_{\beta}
- X(g^{\alpha\beta})\nabla_{\alpha}\del_{\beta}
+ g^{\alpha\beta}\nabla_{[\del_{\alpha},X]}\del_{\beta}
+ g^{\alpha\beta}\nabla_{\alpha}([\del_{\beta},X]).
\endaligned
$$
We also observe that
$$
\aligned
&X(g^{\alpha\beta})\nabla_{\alpha}\del_{\beta}
- g^{\alpha\beta}\nabla_{[\del_{\alpha},X]}\del_{\beta}
- g^{\alpha\beta}\nabla_{\alpha}([\del_{\beta},X])
\\
& = X(g^{\alpha\beta})\nabla_{\alpha}\del_{\beta}
- g^{\alpha\beta}\del_{\alpha}X^{\gamma}\nabla_{\gamma}\del_{\beta}
- g^{\alpha\beta}\nabla_{\alpha}(\del_{\beta}X^{\gamma}\del_{\gamma})
\\
& = \big(X(g^{\alpha\beta}) - g^{\gamma\beta}\del_{\gamma}X^{\alpha} - g^{\alpha\gamma}\del_{\gamma}X^{\beta}\big)\nabla_{\alpha}\del_{\beta} 
- g^{\alpha\beta}\del_{\alpha}\del_{\beta}(X^{\gamma})\del_{\gamma}
\\
& =  - \pi[x]^{\alpha\beta}\nabla_{\alpha}\del_{\beta} 
- g^{\alpha\beta}\del_{\alpha}\del_{\beta}(X^{\gamma})\del_{\gamma}, 
\endaligned
$$
which gives us the desired result.
\end{proof}

\begin{proposition}[Commutation relation for the Dirac operator in generalized wave gauge]
\label{prop1-16-july-2025}
With the notation $W = W^{\alpha}\del_{\alpha} = g^{\alpha\beta}\nabla_{\alpha}\del_{\beta}$
in \eqref{equa-26juillet2025-b}, for any vector field $X$ defined in (a subset of) $\Mcal_{[s_0,s_1]}$, one has 
\begin{equation}
\aligned
\, [\widehat{X},\opDirac]\Psi  & =   - \frac{1}{2} \pi[X]^{\alpha\beta} \del_\alpha \cdot\nabla_{\beta} \Psi
- \frac{1}{4}[X,W]\cdot\Psi
\\
& \quad - \frac{1}{4}\pi[X]^{\alpha\beta}\nabla_{\alpha}\del_{\beta}\cdot\Psi
- \frac{1}{4}g^{\mu\nu}\del_{\mu}\del_{\nu}(X^{\gamma})\del_{\gamma}\cdot\Psi .
\endaligned
\end{equation}
\end{proposition}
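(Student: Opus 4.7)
The identity is an essentially algebraic consequence of two pieces of information already established: the general commutator formula from Proposition~\ref{prop1-22-feb-2025}, which reads
\[
[\widehat{X},\opDirac]\Psi = -\tfrac{1}{2}\pi[X]^{\alpha\beta}\del_\alpha\cdot\nabla_\beta\Psi -\tfrac{1}{4}g^{\alpha\beta}\Ric(X,\del_\alpha)\del_\beta\cdot\Psi -\tfrac{1}{4}\Box_g X\cdot\Psi,
\]
and the wave-gauge expansion of $\Box_g X$ from the lemma immediately preceding Proposition~\ref{prop1-16-july-2025}, namely
\[
\Box_g X = [X,W] - g^{\alpha\beta}\Ric(X,\del_\alpha)\del_\beta + \pi[X]^{\alpha\beta}\nabla_\alpha\del_\beta + g^{\alpha\beta}\del_\alpha\del_\beta(X^\gamma)\del_\gamma.
\]
My plan is simply to substitute the second identity into the first and collect terms.

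The key observation driving the proof is the cancellation of the Ricci contributions: the $-\tfrac{1}{4}g^{\alpha\beta}\Ric(X,\del_\alpha)\del_\beta\cdot\Psi$ term arising in Proposition~\ref{prop1-22-feb-2025} is exactly cancelled by the $+\tfrac{1}{4}g^{\alpha\beta}\Ric(X,\del_\alpha)\del_\beta\cdot\Psi$ contribution coming from $-\tfrac{1}{4}\Box_g X\cdot\Psi$ once $\Box_g X$ is expanded via the generalized wave-gauge identity. This cancellation is in fact the whole point of working in the generalized wave coordinate condition \eqref{equa-26juillet2025-b}: the curvature drops out of the commutator, leaving only terms that are quadratic in derivatives of $X$ (through $\pi[X]$ and $\del\del X^\gamma$) together with a commutator with the gauge defect $W$.

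After this cancellation, the remaining three contributions from the expansion of $\Box_g X$ contribute precisely
\[
-\tfrac{1}{4}[X,W]\cdot\Psi -\tfrac{1}{4}\pi[X]^{\alpha\beta}\nabla_\alpha\del_\beta\cdot\Psi -\tfrac{1}{4}g^{\mu\nu}\del_\mu\del_\nu(X^\gamma)\del_\gamma\cdot\Psi,
\]
and together with the unchanged term $-\tfrac{1}{2}\pi[X]^{\alpha\beta}\del_\alpha\cdot\nabla_\beta\Psi$ from the general commutator formula this yields exactly the stated identity.

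There is no real obstacle here: the work has already been done in Proposition~\ref{prop1-22-feb-2025} (the genuinely delicate commutator computation involving Lemma~\ref{lemma-51} and the Clifford anti-commutation rules) and in the preparatory lemma that expressed $\Box_g X$ under the generalized wave-gauge condition. The only care needed is bookkeeping of signs and coefficients and checking that the Ricci terms enter with opposite signs so as to cancel, which is immediate from the formulas above.
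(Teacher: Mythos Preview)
Your proposal is correct and matches the paper's proof exactly: the paper too starts from Proposition~\ref{prop1-22-feb-2025}, substitutes the expression for $\Box_g X$ from the preceding lemma, and observes that the two Ricci terms cancel, leaving precisely the four terms claimed.
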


\begin{proof} It suffices to compute 
$$
\aligned
\, [\widehat{X},\opDirac]\Psi 
& =   - \frac{1}{2} \pi[X]^{\alpha\beta} \del_\alpha \cdot\nabla_{\beta} \Psi 
-\frac{1}{4}g^{\alpha\beta} \Ric (X,\del_{\alpha})\del_{\beta} \cdot \Psi 
- \frac{1}{4} \Box_gX\cdot\Psi
\\
& =  - \frac{1}{2} \pi[X]^{\alpha\beta} \del_\alpha \cdot\nabla_{\beta} \Psi 
-\frac{1}{4}g^{\alpha\beta} \Ric (X,\del_{\alpha})\del_{\beta} \cdot \Psi 
\\
& \quad - \frac{1}{4}[X,W]\cdot\Psi
+ \frac{1}{4}g^{\alpha\beta}\Ric(X,\del_{\alpha})\del_{\beta}\cdot\Psi
\\
& \quad - \frac{1}{4}\pi[X]^{\alpha\beta}\nabla_{\alpha}\del_{\beta} \cdot\Psi
- \frac{1}{4}g^{\alpha\beta}\del_{\alpha}\del_{\beta}(X^{\gamma})\del_{\gamma}.
\qedhere
\endaligned
$$
\end{proof}

}

%----------------------------------------------------------------------------------------------------------------------------------

\subsection{ Adapted action on tensor fields}
\label{section===54}

{ 

Another important issue is differentiating the Clifford product with respect to a vector field. For convenience in our analysis, we introduce the {\sl adapted spinorial action} on vector and spinor fields (which is also known as the {\sl Lie derivative on spinor fields} in the flat case), defined by
\begin{equation}\label{eq3-04-july-2025}
Z\Psi := \widehat{Z}\Psi = \nabla_Z\Psi - \frac{1}{4}g^{\mu\nu}\del_{\mu}\cdot\nabla_{\nu}Z\cdot\Psi. 
\end{equation}
\begin{equation}\label{eq2-04-july-2025}
ZX:=\widehat{Z}X = \nabla_ZX 
+ \frac{1}{4}g^{\mu\nu}\big(X\cdot\del_{\mu}\cdot\nabla_{\nu}Z - \del_{\mu}\cdot\nabla_{\nu}Z\cdot X\big),
\end{equation}
where the covariant derivatives are the Levi-Civita and the spinorial ones, respectively. This convention allows us to express the Leibniz rules as follows:
\begin{equation}\label{eq4-04-july-2025}
\aligned
Z(fX) & =   Z(f)X + fZX,
\\
Z(f\Psi) & =  Z(f)\Psi + fZ\Psi,
\\
Z(X\cdot\Psi) & =  ZX\cdot\Psi + X\cdot Z\Psi, 
\endaligned
\end{equation}
where $f$ denotes any real-valued scalar field.
For instance, the later identity is checked by writing 
\begin{equation}
\aligned
Z(X\cdot\Psi) & = \, \nabla_Z(X\cdot\Psi) - \frac{1}{4}g^{\mu\nu}\del_{\mu}\cdot\nabla_{\nu}Z\cdot X\cdot\Psi
\\
& = \nabla_ZX\cdot\Psi + X\cdot\nabla_Z\Psi - \frac{1}{4}g^{\mu\nu}\del_{\mu}\cdot\nabla_{\nu}Z\cdot X\cdot\Psi
\\
& = \,\nabla_ZX\cdot\Psi 
+ \frac{1}{4}g^{\mu\nu}\big(X\cdot\del_{\mu}\cdot\nabla_{\nu}Z - \del_{\mu}\cdot\nabla_{\nu}Z\cdot X\big)\cdot\Psi
+ X\cdot\Big(\nabla_Z\Psi - \frac{1}{4}g^{\mu\nu}\del_{\mu}\cdot\nabla_{\nu}\cdot\Psi\Big)
\\
& =  ZX\cdot\Psi + X\cdot Z\Psi.
\endaligned
\end{equation}
With this notation, we can also establish the following identity relating the notions of adapted derivative, Lie derivative, and deformation tensor.  

\begin{lemma}\label{lem4-06-oct-2025(l)}
For any vector fields $X,Z$ one has 
\begin{equation}\label{eq1-04-july-2025}
ZX = \Lcal_ZX + \frac{1}{2}g^{\mu\nu}X^{\alpha}\pi[Z]_{\alpha\nu}\del_{\mu}.
\end{equation}
\end{lemma}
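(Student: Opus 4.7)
The plan is to unpack the right-hand side of the definition \eqref{eq2-04-july-2025} of $\widehat Z X$ and convert its Clifford-algebraic piece into a purely tensorial expression, then compare with the definition of $\Lcal_Z X = [Z,X] = \nabla_Z X - \nabla_X Z$ and the identity $\pi[Z]_{\alpha\nu} = \nabla_\alpha Z_\nu + \nabla_\nu Z_\alpha$. After subtracting $\nabla_Z X$ from both sides, the lemma reduces to the identity
\begin{equation*}
\frac{1}{4} g^{\mu\nu}\bigl(X\cdot\del_\mu\cdot\nabla_\nu Z - \del_\mu\cdot\nabla_\nu Z\cdot X\bigr)
= -\nabla_X Z + \frac{1}{2} g^{\mu\nu} X^\alpha \pi[Z]_{\alpha\nu} \del_\mu,
\end{equation*}
and the whole task is to evaluate the Clifford commutator on the left.

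First, I would set $Y := \nabla_\nu Z$ and apply the Clifford anti-commutation relation \eqref{clifford-anticomm} to the pair $(X,\del_\mu)$ to move $X$ past $\del_\mu$, giving $X\cdot\del_\mu\cdot Y = -\del_\mu\cdot X\cdot Y - 2X_\mu Y$. Then I would apply the same anti-commutation relation to the pair $(X,Y)$ in the remaining term $\del_\mu\cdot X\cdot Y$ to obtain $X\cdot Y + Y\cdot X = -2 g(X,Y)$. Combining these two steps, the cubic Clifford products collapse to the purely vectorial expression
\begin{equation*}
X\cdot\del_\mu\cdot Y - \del_\mu\cdot Y\cdot X = 2\, g(X,Y)\,\del_\mu - 2X_\mu Y.
\end{equation*}
Contracting with $g^{\mu\nu}$ and using $g^{\mu\nu} X_\mu = X^\nu$, $g(X,\nabla_\nu Z) = X^\alpha\nabla_\nu Z_\alpha$, and $X^\nu \nabla_\nu Z = \nabla_X Z$, this yields the compact formula
\begin{equation*}
\frac{1}{4} g^{\mu\nu}\bigl(X\cdot\del_\mu\cdot\nabla_\nu Z - \del_\mu\cdot\nabla_\nu Z\cdot X\bigr)
= \frac{1}{2} g^{\mu\nu} X^\alpha \nabla_\nu Z_\alpha\,\del_\mu - \frac{1}{2} \nabla_X Z.
\end{equation*}

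Finally, I would match this expression with the claimed right-hand side by inserting the decomposition $\pi[Z]_{\alpha\nu} = \nabla_\alpha Z_\nu + \nabla_\nu Z_\alpha$ into $\frac{1}{2} g^{\mu\nu} X^\alpha \pi[Z]_{\alpha\nu} \del_\mu$ and observing that $g^{\mu\nu} X^\alpha \nabla_\alpha Z_\nu \del_\mu = \nabla_X Z$ (a $\nabla_X$ acting on the raised-index vector $Z$). This produces exactly the missing $\tfrac12 \nabla_X Z$ needed to upgrade $-\tfrac12\nabla_X Z$ to $-\nabla_X Z$, closing the computation. The only genuine subtlety is the index bookkeeping in the last step, namely correctly pairing the symmetric piece of $\pi[Z]$ with the raised-index contraction — the Clifford manipulation itself is mechanical once the anti-commutation is invoked twice in the right order.
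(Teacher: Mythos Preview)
Your proof is correct and follows essentially the same approach as the paper: both reduce the Clifford commutator $\tfrac14 g^{\mu\nu}(X\cdot\del_\mu\cdot\nabla_\nu Z - \del_\mu\cdot\nabla_\nu Z\cdot X)$ to the vectorial expression $\tfrac12 g^{\mu\nu}g(X,\nabla_\nu Z)\del_\mu - \tfrac12\nabla_X Z$ by applying the anti-commutation relation twice, then identify the result with $\pi[Z]$. The only cosmetic difference is the order in which the two anti-commutations are applied (you move $X$ past $\del_\mu$ first, then past $\nabla_\nu Z$; the paper does it in the reverse order), which is immaterial.
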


\begin{proof} It suffices to compute 
$$
\aligned
& \frac{1}{4}g^{\mu\nu}\big(X\cdot\del_{\mu}\cdot\nabla_{\nu}Z - \del_{\mu}\cdot\nabla_{\nu}Z\cdot X\big)
\\
& = \frac{1}{2}g^{\mu\nu}g(X,\nabla_{\nu}Z)\del_{\mu}
+\frac{1}{4}g^{\mu\nu}\big(X\cdot\del_{\mu}\cdot\nabla_{\nu}Z + \del_{\mu}\cdot X\cdot\nabla_{\nu}Z\big)  
\\
& = \frac{1}{2}g^{\mu\nu}g(X,\nabla_{\nu}Z)\del_{\mu} 
- \frac{1}{2}g^{\mu\nu}g(X,\del_{\mu})\nabla_{\nu}Z
\\
& =  \frac{1}{2}g^{\mu\nu}X^{\alpha}\nabla_{\nu}Z^{\beta}g_{\alpha\beta}\del_{\mu}
- \frac{1}{2}g^{\mu\nu}X^{\alpha}g_{\alpha\mu}\nabla_{\nu}Z
\\
& = \frac{1}{2}g^{\mu\nu}X^{\alpha}\big(\pi[Z]_{\nu\alpha} - \nabla_{\alpha}Z^{\beta}g_{\beta\nu}\big)\del_{\mu} - \frac{1}{2}\nabla_XZ
=\frac{1}{2}g^{\mu\nu}X^{\alpha}\pi[Z]_{\alpha\nu}\del_{\mu} - \nabla_XZ, 
\endaligned
$$
where we used
$
\pi[Z]_{\nu\alpha} = \nabla_{\nu}Z^{\beta}g_{\alpha\beta} + \nabla_{\alpha}Z^{\beta}g_{\nu\beta}.
$
\end{proof}

}

%---------------------------------------------------------------------------------------------------------------------------------------- 

\subsection{ Commutators for general vector fields}
\label{section===56}

{ 

We first establish the following identity.

\begin{proposition}\label{prop1-22-june-2025}
Consider vector fields $X,Y$ defined in (a subset of) $\Mcal_{[s_0,s_1]}$. If $\widehat{X},\widehat{Y}$ are defined as in \eqref{equa-511m}, one has 
\begin{equation}
[\widehat{X},\widehat{Y}]\Psi = \widehat{[X,Y]}\Psi - \frac{1}{8}\Big(g^{\alpha\beta}\pi[Y]^{\mu\nu}\pi[X]_{\beta\nu}\del_{\mu}\cdot\del_{\alpha} + \pi[X]^{\alpha\beta}\pi[Y]_{\alpha\beta}\Big)\cdot \Psi.
\end{equation}
\end{proposition}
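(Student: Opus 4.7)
The plan is to expand the commutator by setting $c_X := -\tfrac{1}{4}g^{\mu\nu}\del_\mu \cdot \nabla_\nu X$, viewed as a Clifford-algebra-valued section, so that $\widehat{X}\Psi = \nabla_X\Psi + c_X \cdot \Psi$. Using the Leibniz rule for Clifford multiplication $\nabla_Y(c_X \cdot \Psi) = (\nabla_Y c_X)\cdot\Psi + c_X \cdot \nabla_Y \Psi$, which follows from Lemma~\ref{lem1-24-feb-2025} combined with $\nabla g = 0$, the mixed terms $c_X \cdot \nabla_Y \Psi$ and $c_Y \cdot \nabla_X \Psi$ appear in both orderings of the commutator and cancel, yielding the clean decomposition
\begin{equation*}
[\widehat{X},\widehat{Y}]\Psi = [\nabla_X,\nabla_Y]\Psi + (\nabla_X c_Y - \nabla_Y c_X)\cdot \Psi + [c_X,c_Y]\cdot \Psi.
\end{equation*}

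I would first treat the two curvature-carrying pieces. The definition \eqref{equa--512} of $\opCurve$ gives $[\nabla_X,\nabla_Y]\Psi = \nabla_{[X,Y]}\Psi + \opCurve_{X,Y}\Psi$, with $\opCurve_{X,Y}\Psi$ put into Clifford form via \eqref{equa-26juillet2025a}. For $\nabla_X c_Y - \nabla_Y c_X$, I would combine the Ricci identity $\nabla_X\nabla_\nu Y - \nabla_\nu\nabla_X Y = R(X,\del_\nu)Y$ (in the Hessian convention of \eqref{eq1-24-feb-2025}), the torsion-free relation $\nabla_X\del_\nu - \nabla_\nu X = [X,\del_\nu]$, and the first Bianchi identity $R(X,\del_\nu)Y - R(Y,\del_\nu)X = R(X,Y)\del_\nu$ to derive
\begin{equation*}
\nabla_X(\nabla_\nu Y) - \nabla_Y(\nabla_\nu X) = R(X,Y)\del_\nu + \nabla_\nu[X,Y] + \nabla_{[X,\del_\nu]}Y - \nabla_{[Y,\del_\nu]}X.
\end{equation*}
A short index relabeling shows that the resulting curvature contribution $-\tfrac{1}{4}g^{\mu\nu}\del_\mu \cdot R(X,Y)\del_\nu$ is exactly $-\opCurve_{X,Y}$, so it cancels the curvature from the first piece; the $\nabla_\nu[X,Y]$ contribution gives $c_{[X,Y]}\cdot\Psi$, which combines with $\nabla_{[X,Y]}\Psi$ to reconstruct $\widehat{[X,Y]}\Psi$.

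It then remains to show that the Lie-bracket remainder $-\tfrac{1}{4}g^{\mu\nu}\del_\mu \cdot (\nabla_{[X,\del_\nu]}Y - \nabla_{[Y,\del_\nu]}X)\cdot\Psi$ together with $[c_X,c_Y]\cdot\Psi$ produces the stated $\pi$-correction. Writing $[X,\del_\nu] = \nabla_X \del_\nu - \nabla_\nu X$ and expanding in coordinates, the Christoffel contributions of $\nabla_X\del_\nu$ and of $\nabla_\nu X$ cancel by torsion-freeness, so the Lie remainder reduces to a bilinear contraction in $\nabla X$ and $\nabla Y$. The Clifford commutator $[c_X,c_Y]$ is expanded by repeated use of $\del_\alpha\cdot\del_\beta + \del_\beta\cdot\del_\alpha = -2g_{\alpha\beta}$, which reduces the four-fold products to scalar and bivector pieces. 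Splitting $\nabla_\nu X$ into its symmetric part (proportional to $\pi[X]$) and its antisymmetric part, and doing the same for $Y$, the antisymmetric-antisymmetric contributions to $[c_X,c_Y]$ are absorbed by the Lie remainder, while the symmetric parts reassemble into $-\tfrac{1}{8}(g^{\alpha\beta}\pi[Y]^{\mu\nu}\pi[X]_{\beta\nu}\del_\mu\cdot\del_\alpha + \pi[X]^{\alpha\beta}\pi[Y]_{\alpha\beta})\cdot \Psi$.

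The main obstacle is this last algebraic consolidation. After the clean curvature cancellation, one must carefully track how the Lie remainder and the four-fold Clifford products interact, and verify that only $\pi[X]\pi[Y]$ terms survive. The absence in the final formula of any independent antisymmetric-antisymmetric contribution is not superficial: it reflects the design of the Clifford-adapted derivative $\widehat{X}$ as a Kosmann-type spin lift, which reduces to the ordinary Lie derivative on spinors precisely when $X$ is Killing, i.e., $\pi[X]=0$, so any commutator correction must vanish with $\pi$.
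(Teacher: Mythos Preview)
Your strategy coincides with the paper's: the same three-term decomposition, the same curvature cancellation (the paper's first intermediate lemma, yielding \eqref{eq2-22-june-2025}), and the same Clifford consolidation (the second lemma, \eqref{eq5-22-june-2025}). There is, however, a slip in the middle step. The Leibniz rule you invoke holds with $\nabla_X c_Y$ taken as the natural Clifford-bundle connection, and that connection gives $\nabla_X c_Y = -\tfrac{1}{4}g^{\mu\nu}\del_\mu\cdot\nabla_X\nabla_\nu Y$ with the \emph{Hessian} of \eqref{eq1-24-feb-2025}, not the plain composition $\nabla_X(\nabla_\nu Y)$. The Hessian identity (via Ricci $+$ Bianchi, exactly as you outline) reads
\[
\nabla_X\nabla_\nu Y - \nabla_Y\nabla_\nu X
= R(X,Y)\del_\nu + \nabla_\nu[X,Y] + \nabla_{\nabla_\nu Y}X - \nabla_{\nabla_\nu X}Y,
\]
so the actual remainder after curvature cancellation is $-\tfrac{1}{4}g^{\mu\nu}\del_\mu\cdot(\nabla_{\nabla_\nu Y}X - \nabla_{\nabla_\nu X}Y)$, not your Lie-bracket term $\nabla_{[X,\del_\nu]}Y - \nabla_{[Y,\del_\nu]}X$. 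The two differ by $\nabla_{\nabla_X\del_\nu}Y - \nabla_{\nabla_Y\del_\nu}X$, a Christoffel-laden expression that does not reorganize into $\pi[X]\pi[Y]$.

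With the correct remainder in hand, the consolidation does close: the paper's second lemma shows that $[c_X,c_Y]$ contributes exactly $+\tfrac{1}{4}g^{\mu\nu}\del_\mu\cdot(\nabla_{\nabla_\nu Y}X - \nabla_{\nabla_\nu X}Y)$ together with $-\tfrac{1}{8}g^{\alpha\beta}\pi[Y]^{\mu\gamma}\pi[X]_{\beta\gamma}\del_\mu\cdot\del_\alpha - \tfrac{1}{4}\pi[X]^{\alpha\beta}g(\nabla_\alpha Y,\del_\beta)$, the last scalar being $\tfrac{1}{2}\pi[X]^{\alpha\beta}\pi[Y]_{\alpha\beta}$ by symmetry of $\pi$. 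Your version, with the wrong remainder, would leave the Christoffel discrepancy unaccounted for in this final step.
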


The proof consists of two steps. First, we show that the commutators contain (counter-intuitively) \emph{no curvature components}. Second, we express the commutator as a quadratic form of the deformation tensor.

\begin{lemma}
Based on the assumptions in Proposition~\ref{prop1-22-june-2025}, one also has 
\begin{equation}\label{eq2-22-june-2025}
\aligned
\, [\widehat{X},\widehat{Y}]\Psi & =  \widehat{[X,Y]}\Psi
- \frac{1}{4}g^{\mu\nu}\del_{\mu}\cdot(\nabla_{\nabla_{\nu}Y}X 
- \nabla_{\nabla_{\nu}X}Y)\cdot\Psi
\\
& \quad + \frac{1}{16}g^{\alpha\beta}g^{\mu\nu}
\big(\del_{\alpha}\cdot\nabla_{\beta}X\cdot\del_{\mu}\cdot\nabla_{\nu}Y
- \del_{\mu}\cdot\nabla_{\nu}Y\cdot\del_{\alpha}\cdot\nabla_{\beta}X\big)\cdot\Psi.
\endaligned
\end{equation}
\end{lemma}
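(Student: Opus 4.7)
The plan is to expand $[\widehat{X},\widehat{Y}]\Psi$ directly from the definition \eqref{equa-511m} and to show that the Riemann-curvature contributions which appear ultimately cancel by virtue of the first Bianchi identity together with Lemma~\ref{lem1-01-march-2025}. Setting $A_X:=\tfrac14 g^{\alpha\beta}\del_\alpha\cdot\nabla_\beta X$, so that $\widehat{X}\Psi=\nabla_X\Psi-A_X\cdot\Psi$, and using the Leibniz rule together with the covariant constancy of Clifford multiplication (Lemma~\ref{lem1-24-feb-2025}), the ``mixed'' terms $\pm A_X\cdot\nabla_Y\Psi \pm A_Y\cdot\nabla_X\Psi$ cancel upon antisymmetrisation in $(X,Y)$. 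Combined with $[\nabla_X,\nabla_Y]\Psi=\opCurve_{XY}\Psi+\nabla_{[X,Y]}\Psi$ and $\widehat{[X,Y]}\Psi=\nabla_{[X,Y]}\Psi-A_{[X,Y]}\cdot\Psi$, this yields
\begin{equation*}
[\widehat{X},\widehat{Y}]\Psi - \widehat{[X,Y]}\Psi
= \opCurve_{XY}\Psi - \bigl(\nabla_X A_Y - \nabla_Y A_X - A_{[X,Y]}\bigr)\cdot\Psi + [A_X,A_Y]\cdot\Psi.
\end{equation*}
The Clifford commutator $[A_X,A_Y]\cdot\Psi$ coincides (after relabelling of the dummy indices) with the last term of \eqref{eq2-22-june-2025}, so the task reduces to evaluating the bracket $\nabla_X A_Y - \nabla_Y A_X - A_{[X,Y]}$.

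For this, I would exploit the tensorial character of the identity and verify it at an arbitrary point $p$, working in normal coordinates centred at $p$ where $\Gamma^\rho_{\alpha\beta}(p)=0$. Then $\nabla_X\del_\alpha(p)=0$, so the Leibniz contribution from differentiating $\del_\alpha$ drops out and $\nabla_X A_Y$ reduces to $\tfrac14 g^{\alpha\beta}\del_\alpha\cdot\nabla_X(\nabla_\beta Y)$ at $p$. Applying the operator curvature identity $\nabla_X(\nabla_\beta Y) - \nabla_\beta(\nabla_X Y) = R(X,\del_\beta)Y + \nabla_{[X,\del_\beta]}Y$, together with the torsion-free consequence $[X,\del_\beta](p) = -\nabla_\beta X(p)$, one obtains
\begin{equation*}
\nabla_X(\nabla_\beta Y) - \nabla_Y(\nabla_\beta X)
= \nabla_\beta[X,Y] + R(X,Y)\del_\beta + \nabla_{\nabla_\beta Y}X - \nabla_{\nabla_\beta X}Y
\end{equation*}
at $p$, where the first Bianchi identity was used to convert $R(X,\del_\beta)Y - R(Y,\del_\beta)X$ into $R(X,Y)\del_\beta$. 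Contracting with $\tfrac14 g^{\alpha\beta}\del_\alpha\cdot$ and recognising $A_{[X,Y]}$ together with $\tfrac14 g^{\alpha\beta}\del_\alpha\cdot R(X,Y)\del_\beta\cdot\Psi=\opCurve_{XY}\Psi$ by Lemma~\ref{lem1-01-march-2025}, the two $\opCurve_{XY}\Psi$ contributions cancel and only $\tfrac14 g^{\mu\nu}\del_\mu\cdot(\nabla_{\nabla_\nu Y}X - \nabla_{\nabla_\nu X}Y)\cdot\Psi$ survives. Since $p$ was arbitrary, this establishes \eqref{eq2-22-june-2025}.

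The main obstacle will be the careful bookkeeping of the curvature contributions, specifically verifying that the torsion-free bracket $[X,\del_\beta]$, once antisymmetrised in $(X,Y)$, conspires with the first Bianchi identity to produce exactly $R(X,Y)\del_\beta$, which is algebraically equivalent to $\opCurve_{XY}\Psi$ via Lemma~\ref{lem1-01-march-2025}. Once this cancellation is secured, the residual ``connection error'' terms $\nabla_{\nabla_\beta Y}X - \nabla_{\nabla_\beta X}Y$ assemble exactly into the middle term of \eqref{eq2-22-june-2025}, and the Clifford commutator $[A_X,A_Y]$ is transported through the computation unchanged. The use of normal coordinates at $p$ is a purely organisational device that eliminates terms that would in any case reshuffle themselves into the same final expression, and it is justified by the pointwise tensorial nature of the identity.
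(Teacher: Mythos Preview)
Your proof is correct and follows essentially the same strategy as the paper: both expand the commutator directly, isolate the spinor curvature $\opCurve_{XY}\Psi$ together with the curvature terms $R(X,\del_\nu)Y$, $R(\del_\nu,Y)X$ arising from the second-order derivatives of $Y$ and $X$, and then invoke the first Bianchi identity for the cancellation. Your packaging via $A_X$ and the use of normal coordinates at a point are convenient organisational devices (the paper instead works covariantly using its Hessian convention $\nabla_X\nabla_\nu Y$ and carries out the component computation in general coordinates), but the underlying mechanism is identical.
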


\begin{proof}
\bse
By a direct calculation, we have 
\be
\aligned
\widehat{X}(\widehat{Y}\Psi)
& =  \nabla_X\big(\nabla_Y \Psi\big) 
- \frac{1}{4}\nabla_X(g^{\mu\nu}\del_{\mu}\cdot\nabla_{\nu}Y\cdot\Psi)
\\
& \quad -  \frac{1}{4}g^{\alpha\beta}\del_{\alpha}\cdot\nabla_{\beta}X\cdot
\Big(\nabla_Y\Psi - \frac{1}{4}g^{\mu\nu}\del_{\mu}\cdot\nabla_\nu Y\cdot\Psi\Big)
\\
& =  \nabla_X\big(\nabla_Y \Psi\big) 
- \frac{1}{4}g^{\mu\nu}\del_{\mu}\cdot\nabla_X\nabla_{\nu}Y\cdot\Psi 
- \frac{1}{4}g^{\mu\nu}\del_{\mu}\cdot\nabla_{\nu}Y\cdot\nabla_X\Psi
\\
& \quad - \frac{1}{4}g^{\alpha\beta}\del_{\alpha}\cdot\nabla_{\beta}X\cdot\nabla_Y\Psi
+\frac{1}{16}g^{\alpha\beta}g^{\mu\nu}\del_{\alpha}\cdot\nabla_{\beta}X\cdot\del_{\mu}\cdot\nabla_{\nu}Y\cdot\Psi.
\endaligned
\ee
Thus we deduce that 
\begin{equation}\label{eq2-20-june-2025}
\aligned
\, [\widehat{X},\widehat{Y}]\Psi
& = [\nabla_X,\nabla_Y]\Psi 
- \frac{1}{4}g^{\mu\nu}\del_{\mu}
\cdot\big(\nabla_X\nabla_{\nu}Y - \nabla_Y\nabla_{\nu}X\big)\cdot\Psi
\\
& \quad + \frac{1}{16}g^{\alpha\beta}g^{\mu\nu}
\big(\del_{\alpha}\cdot\nabla_{\beta}X\cdot\del_{\mu}\cdot\nabla_{\nu}Y
- \del_{\mu}\cdot\nabla_{\nu}Y\cdot\del_{\alpha}\cdot\nabla_{\beta}X\big)\cdot\Psi.
\endaligned
\end{equation}
The first term in the right-hand side above can be written in terms of the curvature, namely 
\begin{equation}\label{eq1-22-june-2025}
\aligned
\, [\nabla_X,\nabla_Y]\Psi & =  \frac{1}{4}g^{\mu\nu}g^{\gamma\gamma'}X^{\alpha}Y^{\beta} R_{\alpha\beta\gamma'\nu}\del_{\mu}\cdot\del_{\gamma}\cdot\Psi + \nabla_{[X,Y]}\Psi
\\
& = - \frac{1}{4}g^{\mu\nu}g^{\gamma\gamma'}X^{\alpha}Y^{\beta} R_{\gamma'\nu\beta\alpha}\del_{\mu}\cdot\del_{\gamma}\cdot\Psi + \nabla_{[X,Y]}\Psi.
\endaligned
\end{equation}
For the second term in the right-hand side of \eqref{eq2-20-june-2025}, we observe that
$$
\aligned
\nabla_X\nabla_{\nu}Y - \nabla_Y\nabla_{\nu}X
& = 
\nabla_X\nabla_{\nu}Y - \nabla_{\nu}\nabla_XY + \nabla_{\nu}\nabla_XY 
- \nabla_{\nu}\nabla_YX 
\\
& \quad +  \nabla_{\nu}\nabla_YX - \nabla_Y\nabla_{\nu}X
\\
& = 
R(X,\del_{\nu})Y + R(\del_{\nu},Y)X 
+ \nabla_{\nu}([X,Y]) + \nabla_{\nabla_{\nu}Y}X - \nabla_{\nabla_{\nu}X}Y.
\endaligned
$$
We focus particularly on the curvature terms and write 
$$
g^{\mu\nu}\del_{\mu}\cdot R(X,\del_{\nu})Y 
= X^{\alpha}Y^{\beta}g^{\mu\nu}R_{\alpha\nu\ \beta}^{\quad\gamma}\del_{\mu}\cdot\del_{\gamma}
= g^{\mu\nu}g^{\gamma\gamma'}X^{\alpha}Y^{\beta} R_{\alpha\nu\gamma'\beta}\del_{\mu}\cdot\del_{\gamma}
$$
and 
$$
g^{\mu\nu}\del_{\mu}\cdot R(\del_{\nu},Y)X 
= X^{\alpha}Y^{\beta}g^{\mu\nu}R_{\nu\beta\ \alpha}^{\quad\gamma}\del_{\mu}\cdot\del_{\gamma}
= g^{\mu\nu}g^{\gamma\gamma'}X^{\alpha}Y^{\beta} R_{\nu\beta\gamma'\alpha}\del_{\mu}\cdot\del_{\gamma}.
$$
Then we obtain 
\begin{equation}
\aligned
&
- \frac{1}{4}g^{\mu\nu}\big(\nabla_X\nabla_{\nu}Y - \nabla_Y\nabla_{\nu}X\big)
\\
& =  - \frac{1}{4}g^{\mu\nu}g^{\gamma\gamma'}X^{\alpha}X^{\beta}(R_{\gamma'\beta\alpha\nu} +  R_{\gamma'\alpha\nu\beta})\del_{\mu}\cdot\del_{\gamma}\cdot\Psi
\\
& \quad - \frac{1}{4}g^{\mu\nu}\del_{\mu}\cdot\nabla_{\nu}([X,Y])\cdot\Psi
- \frac{1}{4}g^{\mu\nu}\del_{\mu}\cdot(\nabla_{\nabla_{\nu}Y}X 
- \nabla_{\nabla_{\nu}X}Y)\cdot\Psi.
\endaligned
\end{equation}
This identity, together with \eqref{eq1-22-june-2025} leads us to
$$
\aligned
\, &[\widehat{X},\widehat{Y}]\Psi 
\\
& =  \widehat{[X,Y]}\Psi
- \frac{1}{4}g^{\mu\nu}g^{\gamma\gamma'}X^{\alpha}X^{\beta}
 (R_{\gamma'\nu\beta\alpha} + R_{\gamma'\beta\alpha\nu} + R_{\gamma'\alpha\nu\beta})\del_{\mu}\cdot\del_{\gamma}\cdot\Psi
\\
& \quad - \frac{1}{4}g^{\mu\nu}\del_{\mu}\cdot(\nabla_{\nabla_{\nu}Y}X 
- \nabla_{\nabla_{\nu}X}Y)\cdot\Psi
\\
& \quad + \frac{1}{16}g^{\alpha\beta}g^{\mu\nu}
\big(\del_{\alpha}\cdot\nabla_{\beta}X\cdot\del_{\mu}\cdot\nabla_{\nu}Y
- \del_{\mu}\cdot\nabla_{\nu}Y\cdot\del_{\alpha}\cdot\nabla_{\beta}X\big)\cdot\Psi.
\endaligned
$$
For the second term in the right-hand side, we apply the Bianchi identity, and obtain the desired result.
\ese
\end{proof} 

%-------------------------------------------------------------

It remains to handle the last term in the right-hand side of \eqref{eq2-22-june-2025}.

\begin{lemma}
With the notation and assumption in Proposition~\ref{prop1-22-june-2025}, one has 
\begin{equation}\label{eq5-22-june-2025}
\aligned
& \frac{1}{16}g^{\alpha\beta}g^{\mu\nu}
\big(\del_{\alpha}\cdot\nabla_{\beta}X\cdot\del_{\mu}\cdot\nabla_{\nu}Y
- \del_{\mu}\cdot\nabla_{\nu}Y\cdot\del_{\alpha}\cdot\nabla_{\beta}X\big)\cdot\Psi
\\
& = 
- \frac{1}{8}g^{\alpha\beta}\pi[Y]^{\mu\gamma}\pi[X]_{\beta\gamma}\del_{\mu}\cdot\del_{\alpha} 
- \frac{1}{4}\pi[X]^{\alpha\beta}g(\nabla_{\alpha}Y,\del_{\beta})
+\frac{1}{4}g^{\mu\nu}\del_{\mu}\cdot(\nabla_{\nabla_{\nu}Y}X - \nabla_{\nabla_{\nu}X}Y).
\endaligned
\end{equation}
\end{lemma}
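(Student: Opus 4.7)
The plan is to reduce the identity to a direct Clifford-algebraic computation based on the anticommutation $\del_{\mu}\cdot\del_{\nu}+\del_{\nu}\cdot\del_{\mu} = -2g_{\mu\nu}$. The key tool is the following commutator identity for arbitrary vector fields $u, v, w, z$, derived by repeated application of the anticommutation:
\[
u\cdot v\cdot w\cdot z - w\cdot z\cdot u\cdot v
= 2g(u,z)\,w\cdot v - 2g(u,w)\,z\cdot v + 2g(v,z)\,u\cdot w - 2g(v,w)\,u\cdot z.
\]
Applying this with $u=\del_{\alpha}$, $v=\nabla_{\beta}X$, $w=\del_{\mu}$, $z=\nabla_{\nu}Y$ and contracting with $\tfrac{1}{16}g^{\alpha\beta}g^{\mu\nu}$ yields four pieces, which I label I, II, III, IV in the order above.

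\textbf{Easy pair and hard pair.} Pieces I and IV involve the contractions $g(\del_{\alpha},\nabla_\nu Y)$ and $g(\nabla_{\beta}X,\del_\mu)$, whose metric-raisings give immediately
\[
\text{Piece I} + \text{Piece IV}
= \tfrac{1}{8}g^{\mu\nu}\del_{\mu}\cdot\bigl(\nabla_{\nabla_{\nu}Y}X - \nabla_{\nabla_{\nu}X}Y\bigr),
\]
which is exactly half of the third term of the target right-hand side. Pieces II and III involve $g^{\beta\nu}\nabla_\nu Y\cdot\nabla_\beta X$ and $g^{\alpha\beta}g^{\mu\nu}g(\nabla_\beta X,\nabla_\nu Y)\del_\alpha\cdot\del_\mu$. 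After expanding the vectors in coordinates and renaming $\alpha\leftrightarrow\mu$ in Piece III (which swaps the roles of $X$ and $Y$ attached to those indices but leaves the sum invariant), both pieces take the uniform form $\textrm{coefficient}^{\alpha\mu}\,\del_{\mu}\cdot\del_{\alpha}$.

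\textbf{Matching and main obstacle.} To identify the target's first term, I would expand $\pi[X]^{\alpha}_{\ \gamma} = (\nabla^{\alpha}X)_{\gamma}+(\nabla_{\gamma}X)^{\alpha}$ and $\pi[Y]^{\mu\gamma} = (\nabla^{\mu}Y)^{\gamma}+(\nabla^{\gamma}Y)^{\mu}$, producing four quadratic cross-terms. Two of these match precisely the twisted products $(\nabla^\mu X)^\gamma(\nabla_\gamma Y)^\alpha$ and $(\nabla^\mu Y)^\gamma(\nabla_\gamma X)^\alpha$ that upgrade the Pieces I+IV contribution from $\tfrac{1}{8}$ to the targeted $\tfrac{1}{4}$; the remaining two combine with Pieces II and III to form a coefficient-tensor that turns out to be symmetric in $(\alpha,\mu)$. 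Since $\del_\mu\cdot\del_\alpha = -g_{\mu\alpha} + \del_\mu\wedge\del_\alpha$ and the symmetric--wedge contraction vanishes, the bivector contributions cancel and a pure scalar survives. The main obstacle is verifying that this residual scalar equals exactly $-\tfrac{1}{4}\pi[X]^{\alpha\beta}g(\nabla_\alpha Y,\del_\beta)$; this follows from a second application of $\pi[X]^{\alpha\beta} = (\nabla^\alpha X)^\beta + (\nabla^\beta X)^\alpha$, taking care to distinguish the two distinct orientations $g(\nabla^\alpha X, \nabla^\mu Y)$ and $g(\nabla^\mu X, \nabla^\alpha Y)$ that arise in the computation (their sum is what is symmetric in $(\alpha,\mu)$, enabling the wedge cancellation). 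A sanity check at $X=Y$, where the left-hand side vanishes by the antisymmetry of the commutator under simultaneous relabeling $(\alpha,\beta)\leftrightarrow(\mu,\nu)$, confirms the sign normalizations.
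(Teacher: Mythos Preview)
Your approach is correct and essentially the same as the paper's: both begin from the identical four-term Clifford decomposition of the commutator (your pieces I--IV coincide, up to anticommutation reorderings, with the paper's $2A_4$, $2A_2$, $-2A_1$, $-2A_3$). The only difference is bookkeeping direction --- the paper substitutes the identity $\pi[X]_{\beta\gamma}=\nabla_\beta X_\gamma+\nabla_\gamma X_\beta$ \emph{forward} into $A_1$ and $A_2$ to build up the $\pi\pi$ term, whereas you expand the target $\pi\pi$ term and match \emph{backward}.

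One caution on the ``upgrade'' step: of the two twisted cross-terms in the $\pi[Y]^{\mu\gamma}\pi[X]^{\alpha}_{\ \gamma}$ expansion, one is indeed $(\nabla^\mu Y)^\gamma(\nabla_\gamma X)^\alpha$ and sits in the same Clifford slot as I, IV, and the third target term; but the other is $(\nabla^\alpha X)^\gamma(\nabla_\gamma Y)^\mu$ (your written indices are swapped), which yields $\nabla_{\nabla^\alpha X}Y\cdot\del_\alpha$ with the \emph{opposite} Clifford order. Reordering it drops an extra scalar $\tfrac{1}{4}g^{\alpha\beta}g(\nabla_{\nabla_\beta X}Y,\del_\alpha)$. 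This scalar is exactly what is needed: the residual from II$+$III versus (a)$+$(d) is $-\tfrac{1}{4}g^{\beta\nu}g(\nabla_\beta X,\nabla_\nu Y)$, which does \emph{not} equal $-\tfrac{1}{4}\pi[X]^{\alpha\beta}g(\nabla_\alpha Y,\del_\beta)$ by itself --- the discrepancy is precisely cancelled by that extra scalar. So your strategy closes, but the vector-type match is not as clean as ``upgrade from $\tfrac{1}{8}$ to $\tfrac{1}{4}$'' suggests; there is a scalar leak that must be tracked into the final step.
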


\begin{proof} 
\bse
First of all, we compute the left-hand side of \eqref{eq5-22-june-2025} (up to a multiplicative factor $16$), that is, 
\begin{equation}\label{eq4-22-june-2025}
\aligned
A & := g^{\alpha\beta}g^{\mu\nu}
\big(\del_{\alpha}\cdot\nabla_{\beta}X\cdot\del_{\mu}\cdot\nabla_{\nu}Y
- \del_{\mu}\cdot\nabla_{\nu}Y\cdot\del_{\alpha}\cdot\nabla_{\beta}X\big)
\\
& = \,g^{\alpha\beta}g^{\mu\nu}
\big(- \del_{\alpha}\cdot\del_{\mu}\cdot\nabla_{\beta}X\cdot\nabla_{\nu}Y 
-2g(\nabla_{\beta}X,\del_{\mu})\del_{\alpha}\cdot\nabla_{\nu}Y\big)
\\
& \quad + g^{\alpha\beta}g^{\mu\nu}\big(\del_{\mu}\cdot\del_{\alpha}\cdot\nabla_{\nu}Y\cdot\nabla_{\beta}X+2g(\nabla_{\nu}Y,\del_{\alpha})\del_{\mu}\cdot\nabla_{\beta}X\big)
\\
& = g^{\alpha\beta}g^{\mu\nu}\del_{\mu}\cdot\del_{\alpha}\cdot
\big(\nabla_{\beta}X\cdot\nabla_{\nu}Y + \nabla_{\nu}Y\cdot\nabla_{\beta}X\big)
+2g^{\alpha\beta}g^{\mu\nu}g_{\alpha\mu}\nabla_{\beta}X\cdot\nabla_{\nu}Y 
\\
& \quad - 2g^{\alpha\beta}g^{\mu\nu}g(\nabla_{\beta}X,\del_{\mu})\del_{\alpha}\cdot\nabla_{\nu}Y
+2g^{\alpha\beta}g^{\mu\nu}g(\nabla_{\nu}Y,\del_{\alpha})\del_{\mu}\cdot\nabla_{\beta}X
\\
& = - 2g^{\alpha\beta}g^{\mu\nu}g(\nabla_{\nu}Y,\nabla_{\beta}X)\del_{\mu}\cdot\del_{\alpha}
+2g^{\alpha\beta}g^{\mu\nu}g_{\alpha\mu}\nabla_{\beta}X\cdot\nabla_{\nu}Y 
\\
& \quad - 2g^{\alpha\beta}g^{\mu\nu}g(\nabla_{\beta}X,\del_{\mu})\del_{\alpha}\cdot\nabla_{\nu}Y
+2g^{\alpha\beta}g^{\mu\nu}g(\nabla_{\nu}Y,\del_{\alpha})\del_{\mu}\cdot\nabla_{\beta}X
\\
& =: -2A_1 + 2A_2 - 2A_3 + 2A_4. 
\endaligned
\end{equation}

With the notation $\nabla_{\alpha}X^{\beta}\del_{\beta} = \nabla_{\alpha}X$, we obtain 
\begin{equation}
\aligned
&g_{\gamma\beta}\nabla_{\alpha}X^{\gamma} + g_{\gamma\alpha}\nabla_{\beta}X^{\gamma} = \pi[X]_{\alpha\beta},
\\
&g^{\gamma\beta}\nabla_{\gamma}X^{\alpha} + g^{\gamma\alpha}\nabla_{\gamma}X^{\beta}
= \pi[X]^{\alpha\beta},
\endaligned
\end{equation}
Now for the first term in the right-hand side of \eqref{eq4-22-june-2025}, we find 
$$
\aligned
A_1 & =  g^{\alpha\beta}g^{\mu\nu}g(\nabla_{\nu}Y,\nabla_{\beta}X)\del_{\mu}\cdot\del_{\alpha}
\\
& = g^{\alpha\beta}g^{\mu\nu}\nabla_{\beta}X^{\delta}\nabla_{\nu}Y^{\gamma}g_{\gamma\delta}\del_{\mu}\cdot\del_{\alpha}
\\
& = g^{\alpha\beta}g^{\mu\nu}(\nabla_{\beta}X^{\delta}g_{\gamma\delta} + \nabla_{\gamma}X^{\delta}g_{\beta\delta})\nabla_{\nu}Y^{\gamma}\del_{\mu}\cdot\del_{\alpha}
-g^{\alpha\beta}g^{\mu\nu}\nabla_{\gamma}X^{\delta}g_{\beta\delta}\nabla_{\nu}Y^{\gamma}\del_{\mu}\cdot\del_{\alpha}
\\
& = g^{\alpha\beta}g^{\mu\nu}\pi[X]_{\beta\gamma}\nabla_{\nu}Y^{\gamma}\del_{\mu}\cdot\del_{\alpha} - g^{\mu\nu}\nabla_{\gamma}X^{\alpha}\nabla_{\nu}Y^{\gamma}\del_{\mu}\cdot\del_{\alpha}
\\
& = g^{\alpha\beta}g^{\mu\nu}\pi[X]_{\beta\gamma}\nabla_{\nu}Y^{\gamma}\del_{\mu}\cdot\del_{\alpha} - g^{\mu\nu}\del_{\mu}\cdot\nabla_{\nabla_{\nu}Y}X
\\
& = g^{\alpha\beta}\big(g^{\mu\nu}\nabla_{\nu}Y^{\gamma}
+ g^{\gamma\nu}\nabla_{\nu}Y^{\mu}\big)\pi[X]_{\beta\gamma}\del_{\mu}\cdot\del_{\alpha} 
\\
&\quad- g^{\alpha\beta}g^{\gamma\nu}\nabla_{\nu}Y^{\mu}\pi[X]_{\beta\gamma}\del_{\mu}\cdot\del_{\alpha} 
- g^{\mu\nu}\del_{\mu}\cdot\nabla_{\nabla_{\nu}Y}X
\\
& = g^{\alpha\beta}\pi[Y]^{\mu\gamma}\pi[X]_{\beta\gamma}\del_{\mu}\cdot\del_{\alpha} 
- g^{\alpha\beta}g^{\gamma\nu}\pi[X]_{\beta\gamma}\nabla_{\nu}Y\cdot\del_{\mu}\cdot\del_{\alpha}
- g^{\mu\nu}\del_{\mu}\cdot\nabla_{\nabla_{\nu}Y}X,
\endaligned
$$
while
\be
\aligned
A_2 = g^{\mu\nu}\nabla_{\mu}X\cdot\nabla_{\nu}Y 
& =  g^{\mu\nu}\nabla_{\mu}X^{\gamma}\nabla_{\nu}Y^{\delta}\del_{\gamma}\cdot\del_{\delta}
\\
& = \big(g^{\mu\nu}\nabla_{\mu}X^{\gamma} + g^{\mu\gamma}\nabla_{\mu}X^{\nu}\big)\nabla_{\nu}Y^{\delta}\del_{\gamma}\cdot\del_{\delta}
-g^{\mu\gamma}\nabla_{\mu}X^{\nu}\nabla_{\nu}Y^{\delta}\del_{\gamma}\cdot\del_{\delta}
\\
& = \pi[X]^{\nu\gamma}\del_{\gamma}\cdot\nabla_{\nu}Y
- g^{\mu\gamma}\del_{\gamma}\cdot\nabla_{\nabla_{\mu}X}Y, 
\endaligned
\ee
and 
\be
\aligned
&A_3 = g^{\alpha\beta}g^{\mu\nu}g(\nabla_{\beta}X,\del_{\mu})\del_{\alpha}\cdot\nabla_{\nu}Y
=g^{\mu\nu}g^{\alpha\beta}\nabla_{\beta}X^{\gamma}g_{\gamma\mu}\del_{\alpha}\cdot\nabla_{\nu}Y
=g^{\alpha\beta}\del_{\alpha}\cdot\nabla_{\nabla_{\beta}X}Y,
\\
&A_4 = g^{\alpha\beta}g^{\mu\nu}g(\nabla_{\nu}Y,\del_{\alpha})\del_{\mu}\cdot\nabla_{\beta}X 
=g^{\mu\nu}g^{\alpha\beta}\nabla_{\nu}Y^{\gamma}g_{\gamma\alpha}\del_{\mu}\cdot\nabla_{\beta}X
=g^{\mu\nu}\del_{\mu}\cdot\nabla_{\nabla_{\nu}Y}X.
\endaligned
\ee
We thus return to $A$ and deduce that 
$$
\aligned
{A \over 2}  
& = -g^{\alpha\beta}\pi[Y]^{\mu\gamma}\pi[X]_{\beta\gamma}\del_{\mu}\cdot\del_{\alpha} 
+ g^{\alpha\beta}g^{\gamma\nu}\pi[X]_{\beta\gamma}\nabla_{\nu}Y\cdot\del_{\alpha} 
+ g^{\mu\nu}\del_{\mu}\cdot\nabla_{\nabla_{\nu}Y}X
\\
& \quad +  \pi[X]^{\nu\gamma}\del_{\gamma}\cdot\nabla_{\nu}Y
- g^{\mu\gamma}\del_{\gamma}\cdot\nabla_{\nabla_{\mu}X}Y
-g^{\alpha\beta}\del_{\alpha}\cdot\nabla_{\nabla_{\beta}X}Y
+g^{\mu\nu}\del_{\mu}\cdot\nabla_{\nabla_{\nu}Y}X
\\
& = -g^{\alpha\beta}\pi[Y]^{\mu\gamma}\pi[X]_{\beta\gamma}\del_{\mu}\cdot\del_{\alpha} 
+ g^{\alpha\beta}g^{\gamma\nu}\pi[X]_{\beta\gamma}\nabla_{\nu}Y\cdot\del_{\alpha} 
+\pi[X]^{\nu\gamma}\del_{\gamma}\cdot\nabla_{\nu}Y
\\
& \quad + 2g^{\mu\nu}\del_{\mu}\cdot(\nabla_{\nabla_{\nu}Y}X - \nabla_{\nabla_{\nu}X}Y)
\\
& = \pi[X]^{\nu\gamma}(\nabla_{\nu}Y\cdot\del_{\gamma} + \del_{\gamma}\cdot\nabla_{\nu}Y)
-g^{\alpha\beta}\pi[Y]^{\mu\gamma}\pi[X]_{\beta\gamma}\del_{\mu}\cdot\del_{\alpha}  
+2g^{\mu\nu}\del_{\mu}\cdot(\nabla_{\nabla_{\nu}Y}X - \nabla_{\nabla_{\nu}X}Y)
\\
& = -2\pi[X]^{\alpha\beta}g(\nabla_{\alpha}Y,\del_{\beta})
-g^{\alpha\beta}\pi[Y]^{\mu\gamma}\pi[X]_{\beta\gamma}\del_{\mu}\cdot\del_{\alpha}  
+2g^{\mu\nu}\del_{\mu}\cdot(\nabla_{\nabla_{\nu}Y}X - \nabla_{\nabla_{\nu}X}Y).
\endaligned
$$
\ese
\end{proof}

%------------------------------------------------------------------

\begin{proof}[Proof of Proposition~\ref{prop1-22-june-2025}]
Thanks to the symmetry property of the deformation tensor $\pi[X]^{\alpha\beta}$, we have 
\be
\aligned
\pi[X]^{\alpha\beta}g(\nabla_{\alpha}Y,\del_{\beta})
=\frac{1}{2}\pi[X]^{\alpha\beta}\big(\nabla_{\alpha}Y^{\gamma}g_{\gamma\beta} + \nabla_{\beta}Y^{\gamma}g_{\gamma\alpha}\big)
=\frac{1}{2}\pi[X]^{\alpha\beta}\pi[Y]_{\alpha\beta}.
\endaligned
\ee
Substituting this result and \eqref{eq5-22-june-2025} into \eqref{eq2-22-june-2025}, we obtain the desired result.
\end{proof}

}

%================================================================================

\section{Geometry of the Euclidean-hyperboloidal foliation}
\label{section=N4}

\subsection{ The Euclidean-hyperboloidal foliation}
\label{section===71}

{

\paragraph{Time function and foliation.}

We work with a general spacetime $(\mathcal{M},g)$ of dimension $3+1$, equipped with a global coordinate chart $(x^\alpha)$ (with $\alpha=0,1,2,3\}$) and we set $t = x^0$ and $x = (x^1,x^2,x^3)$. The notion of Euclidean-hyperboloidal foliation proposed in \cite{PLF-YM-PDE} is based on a choice of a \emph{time function} $T=T(s,r)$ satisfying the differential equation 
\begin{equation}\label{eq3-28-july-2025}
\del_rT(s,r) = \frac{r \, \xi(s,r)}{\sqrt{s^2+r^2}},
\qquad T(s,0) = s, 
\end{equation}
where $\xi=\xi(s,r) \in [0,1]$ is a \emph{cut-off function} satisfying 
\begin{equation}
\xi(s,r) =
\begin{cases}
1, \quad &r<r^{\Hcal}(s), 
\\
0, \qquad &r>r^{\Ecal}(s).
\end{cases}
\end{equation}
Here, we have introduced the hyperboloidal radius $r^{\Hcal}(s) := \frac{1}{2}(s^2-1)$ and the Euclidean radius  $r^{\Ecal}(s) = \frac{1}{2}(s^2+1)$, as we call them. The hypersurfaces
% We then set 
\be
\mathcal{M}_s := \{(t,x) \, / \,  t = T(s,r)\} 
\ee
we remark that
\begin{equation}\label{eq1-19-march-2025}
0\leq \del_rT<1.
\end{equation}
The collection of these hypersurfaces form a foliation of the spacetime
\be
\mathcal{M}_{[s_0,+\infty)} = \bigcup_{s\in [s_0,+\infty)} \mathcal{M}_s.
\ee
whose past boundary is $\mathcal{M}_{s_0}$. 
Let us denote by $\vec{n}$ the future oriented, normal unit vector on each slice $\mathcal{M}_s$, by $\sigmab$ the (Riemannian) restriction of the Lorentzian metric $g$, and by $\mathrm{Vol}_{\sigmab}$ the associated volume form. 

In addition, we decompose $\Mcal_s$ into three regions, as follows: 
\be
\mathcal{M}_s^{\Hcal}:=\mathcal{M}_s\cap\{r\leq r^{\Hcal}(s)\}, \quad 
\mathcal{M}_s^{\mathcal{M}} := \mathcal{M}_s\cap\{r^{\Hcal}(s)\leq r\leq r^{\Ecal}(s)\}, \quad
\mathcal{M}_s^{\Ecal}:=\mathcal{M}_s\cap \{r\geq r^{\Ecal}(s)\}, 
\ee
and we set  
\be
\mathcal{M}_{[s_0,s_1]}^{\Hcal} := \bigcup_{s_0\leq s\leq s_1} \mathcal{M}_s^{\Hcal}.
\ee
We also use the notation $\mathcal{M}^{\EM}_s = \mathcal{M}_s^{\mathcal{M}}\cup \mathcal{M}_s^{\Ecal}$, together with $\mathcal{M}_{[s_0,+\infty)}^{\EM}$, defined analogously. 

For technical reason, we also introduce the near-light-cone region:
\begin{equation}\label{eq6-05-oct-2025}
\Mcal^{\near}_{\ell,[s_0,s_1]} := \Mcal^{\ME}_{[s_0,s_1]}\cap\{r\leq (1- \ell)^{-1}t\},\quad 0< \ell < 1/2.
\end{equation}
%-------------------------------------------------------------------------------------------------

\paragraph{Basic properties of the foliation.}

The above construction introduces a parameterization in the variable $(s,x^a)$. The Jacobian between $(s,x^a)$ and the Cartesian coordinates $(t, x^a)$ reads 
\begin{equation}\label{eq6-10-april-2025}
\aligned
\Phib & :=  \frac{\del(t,x)}{\del(s,x)} 
=
\left(
\begin{array}{cc}
\del_sT & \del_bT
\\
\del_sx^a& \del_{b}x^a
\end{array}
\right)
=
\left(
\begin{array}{cc}
\del_sT &(x^b/r)\del_rT
\\
0 & \mathrm{I}_3
\end{array}
\right),
\\
\Psib :=
\Phib^{-1} & = \frac{\del(s,x)}{\del(t,x)} 
=
\left(
\begin{array}{cc}
(\del_sT)^{-1} & -(x^b/r)\del_rT
\\
0 & \mathrm{I}_3
\end{array}
\right),
\endaligned
\end{equation}
and we also introduce the Jacobian determinant $J := \det\frac{\del(s,x)}{\del(t,x)}$.

With these notation, we write that $\del_rT$ as $\delb_rT$ with former  $T$ regarded as a function of $(s,r)$ and the latter $T$ regarded as a scalar field defined in $\Mcal_{[s_0,s_1]}$ From now on, we apply the latter one. The same principle is applied on $\del_sT$ and $\delb_s T$.

From~\cite[Lemma 3.3]{PLF-YM-PDE}, we recall the following estimate: 
\begin{equation}\label{eq1-21-march-2025}
J = \delb_sT =\del_sT
\begin{cases}
=s/t = \frac{s}{\sqrt{s^2+r^2}}, \qquad &r\leq r^{\Hcal},
\\
\in \left[\frac{s\xi}{\sqrt{s^2+r^2}} + \frac{3}{5}(1- \xi)s, \frac{s\xi}{\sqrt{s^2+r^2}} + 2(1- \xi)s\right], \qquad &r^{\Hcal}\leq r\leq r^{\Ecal},
\\
\in \left[\frac{3s}{5},2s\right], \qquad & r^{\Ecal}\leq r. 
\end{cases}
\end{equation}
The flat volume form of $\Mcal_s$ is defined as 
\be
\zeta^2 := 1- |\delb_rT|^2 = \frac{s^2}{s^2+r^2} + (1- \xi^2(s,r))\frac{r^2}{s^2+r^2}.
\ee
This used to be called the ``energy coefficient'' in our previous work.
We then observe that 
\begin{equation}\label{eq4-21-march-2025}
0\leq 
J \, \zeta^{-1}
\begin{cases}
=1, \qquad & r\leq r^{\Hcal},
\\
\leq \xi + 2(1- \xi)^{1/2}s, \qquad & r^{\Hcal}\leq r\leq r^{\Ecal},
\\
\in [3s/5,2s], \qquad &r^{\Ecal}\leq r.
\end{cases}
\end{equation}

\bse
It can also be checked that, in the transition interval $r^{\Hcal}\leq r\leq r^{\Ecal}$, 
\be
\aligned
\frac{\delb_sT}{\sqrt{1- |\delb_rT|^2}} 
& = \sqrt{\frac{s^2+r^2}{(1- \xi^2)r^2+s^2}}\Big(\frac{s\xi}{\sqrt{s^2+r^2}} + \lambda (1- \xi)s\Big)
\\
& = s\xi\sqrt{\frac{1}{(1- \xi^2)r^2+s^2}} + \lambda(1- \xi) s \sqrt{1 + \frac{\xi^2r^2}{(1- \xi^2)r^2+s^2}}, 
\endaligned
\ee
in which the first term is (universally) bounded by $\xi$. For the second term, we derive the upper bound $(1- \xi)^{1/2}$ by writing  
\be
\aligned
(1- \xi)\sqrt{1+\frac{\xi^2r^2}{(1- \xi)(1+\xi)r^2+s^2}} 
& =  (1- \xi)^{1/2}\sqrt{(1- \xi) + \frac{(1- \xi)\xi^2r^2}{(1- \xi)(1+\xi)r^2 + s^2}}
\\
& \leq (1- \xi)^{1/2}\sqrt{(1- \xi) + 1}\lesssim (1- \xi)^{1/2}.
\endaligned
\ee
\ese

The following estimate was stated and proven in~\cite[Lemma 3.5]{PLF-YM-PDE}.

\begin{lemma}\label{lem1-21-march-2025}
In the merging and Euclidean domains, the
weight $\zeta$ is controlled by the Jacobian, as follows: 
\be
K_1 \zeta^2 s\leq J\leq  K_2 \zeta^2s \text{ in }\mathcal{M}^{\mathcal{ME}},
\ee
where $K_1$, $K_2$ are universal constants.
\end{lemma}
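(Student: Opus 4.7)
The proof splits according to the two sub-regions making up $\Mcal^{\EM}$. In $\Mcal_s^{\Ecal}$ one has $r\geq r^{\Ecal}$, so $\xi\equiv 0$; then $\zeta^{2}=1$ and the third line of \eqref{eq1-21-march-2025} gives $J\in[3s/5,2s]$, yielding both inequalities with $K_{1}=3/5$ and $K_{2}=2$. All the work is therefore concentrated in the merging strip $\Mcal_s^{\Mcal}$, where $r^{\Hcal}\leq r\leq r^{\Ecal}$.

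On this strip I would work directly from the closed-form identities
\[
\zeta^{2} \;=\; \frac{s^{2}+(1-\xi^{2})r^{2}}{s^{2}+r^{2}}, \qquad J \;=\; \frac{s\xi}{\sqrt{s^{2}+r^{2}}}+\lambda(1-\xi)s,\quad \lambda\in\bigl[\tfrac{3}{5},2\bigr],
\]
together with the geometric pinning $|r-s^{2}/2|\leq 1/2$. Fixing a universal threshold $s_{\ast}\geq 2$, for $s\geq s_{\ast}$ one has $r\asymp s^{2}$ and hence $s^{2}+r^{2}\asymp s^{4}$; substituting into the formulas above yields
\[
s\zeta^{2} \;\asymp\; s^{-1}+s(1-\xi^{2}), \qquad J \;\asymp\; \xi s^{-1}+(1-\xi)s.
\]
The upper bound $J\lesssim s\zeta^{2}$ is immediate from $\xi\leq 1$ together with $1-\xi\leq(1-\xi)(1+\xi)=1-\xi^{2}$. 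For the lower bound I would split at $\xi=1/2$: if $\xi\leq 1/2$, the summand $(1-\xi)s\geq s/2$ alone dominates $s^{-1}+(1-\xi^{2})s\leq s^{-1}+2s\leq 2s$ (using $s\geq 1$); if $\xi\geq 1/2$, then $\xi s^{-1}\geq(2s)^{-1}$ absorbs $s^{-1}$ while $(1-\xi)s$ controls $(1-\xi^{2})s=(1-\xi)(1+\xi)s$ up to the factor $1+\xi\leq 2$. For the remaining range $s\in[s_{0},s_{\ast}]$ the merging strip is a compact subset of $(s,r)$-space on which $J=\delb_{s}T>0$ by construction and $\zeta^{2}\geq s^{2}/(s^{2}+r^{2})>0$ by \eqref{eq1-19-march-2025}, so continuity of the ratio gives two-sided universal bounds depending only on $s_{0}$ and $s_{\ast}$.

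The main analytical obstacle is the degeneration of $\zeta^{2}$ across the strip: it shrinks to $\asymp s^{-2}$ near $r^{\Hcal}$ (where $\xi\to 1$) and grows to $\asymp 1$ near $r^{\Ecal}$ (where $\xi\to 0$); without the pinning $r\asymp s^{2}$ the two-sided comparison would fail to be uniform in $s$. Identifying this forced geometric scale is precisely what collapses the equivalence $J\asymp s\zeta^{2}$ to the elementary scalar comparison between $\xi s^{-1}+(1-\xi)s$ and $s^{-1}+(1-\xi^{2})s$, which is then verified by the $\xi\leq 1/2$ vs.\ $\xi\geq 1/2$ dichotomy above.
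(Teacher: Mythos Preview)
Your argument is correct and, unlike the paper, self-contained: the paper simply invokes \cite[Lemma~3.5]{PLF-YM-PDE} for the merging strip $\Mcal_s^{\Mcal}$ and treats $\Mcal_s^{\Ecal}$ exactly as you do. What you have written is essentially a clean reproof of that cited lemma, exploiting the geometric pinning $r\asymp s^{2}$ on the strip to reduce the two-sided bound to an elementary comparison between $\xi s^{-1}+(1-\xi)s$ and $s^{-1}+(1-\xi^{2})s$, then handled by the $\xi\lessgtr 1/2$ split.

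One trivial slip: the chain ``$s^{-1}+(1-\xi^{2})s\leq s^{-1}+2s\leq 2s$'' has a false last inequality; you mean $s^{-1}+(1-\xi^{2})s\leq s^{-1}+s\leq 2s$ for $s\geq 1$, which is what the argument actually uses. Also, your compactness step produces constants depending on $s_{0}$; this is consistent with the paper's convention (where $s_{0}=2$ is fixed), but worth stating explicitly since the lemma advertises ``universal'' constants.
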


\begin{proof} 
Indeed, ~\cite[Lemma 3.5]{PLF-YM-PDE} guarantees the estimate in the domain $\mathcal{M}^{\mathcal{M}}$. On the other hand, in the Euclidean domain $\mathcal{M}^{\Ecal}$, we only need to recall \eqref{eq1-21-march-2025} and use that $\zeta = 1$.
\end{proof}

%------------------------------------------------------------

}

We will establish the following properties on the cut-off functions.
\begin{lemma}
\label{lem1-05-aout-2025}
There exists a cut-off function $\xi(s,r)$, such that
\begin{equation}\label{eq4-05-aout-2025}
|\del_r\xi(s,r)| + |\del_r\del_r\xi(s,r)|\lesssim_{\delta}\zeta^{2- \delta},\quad \forall 1\geq \delta>0.
\end{equation}
\end{lemma}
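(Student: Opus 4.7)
The plan is to construct $\xi$ explicitly by pulling back a universal smooth one-dimensional cut-off via the map $r \mapsto \rho := r - r^{\Hcal}(s)$. Outside the transition region $\rho \in [0,1]$, the function $\xi$ is locally constant, so both $\del_r\xi$ and $\del_r^2\xi$ vanish and the estimate holds trivially; the entire analysis thus reduces to $\rho \in [0,1]$. Concretely, I take $\xi(s,r) := \chi(r - r^{\Hcal}(s))$ with $\chi \in C^\infty(\RR;[0,1])$ strictly decreasing on $[0,1]$, equal to $1$ on $(-\infty,0]$ and to $0$ on $[1,\infty)$, with all derivatives vanishing at $\rho = 0$ and $\rho = 1$. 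A standard example is
\begin{equation*}
\chi(\rho) = \frac{f(1-\rho)}{f(\rho)+f(1-\rho)},
\qquad
f(x) := e^{-1/x} \text{ for } x > 0, \quad f(x) := 0 \text{ for } x \leq 0.
\end{equation*}
Since $\del_r\rho = 1$, we have $\del_r\xi = \chi'(\rho)$ and $\del_r^2\xi = \chi''(\rho)$ in the transition region, both uniformly bounded functions of $\rho$.

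I then separate the parameter $s$. For $s$ in a fixed bounded range $[1, s_*]$, a direct computation shows $\zeta^2 \geq s^2/(s^2+r^2) \geq c(s_*) > 0$ on the compact transition region, so the bound reduces to $|\chi'|+|\chi''| \leq C$, which is automatic. For $s \geq s_*$ large, using $r \in [(s^2-1)/2, (s^2+1)/2]$ together with the monotonicity of $r \mapsto r^2/(s^2+r^2)$, I obtain uniformly $r^2/(s^2+r^2) \gtrsim 1$ and $s^2/(s^2+r^2) \gtrsim s^{-2}$, whence
\begin{equation*}
\zeta^2 = \frac{s^2 + (1-\xi^2)\, r^2}{s^2+r^2} \gtrsim s^{-2} + \bigl(1-\chi^2(\rho)\bigr).
\end{equation*}
The problem therefore reduces to the purely one-dimensional estimate: for every $\delta \in (0,1]$,
\begin{equation*}
|\chi'(\rho)| + |\chi''(\rho)| \leq C_\delta \bigl(s^{-2} + (1-\chi^2(\rho))\bigr)^{1-\delta/2},
\qquad \rho \in [0,1],\ s \geq s_*.
\end{equation*}

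I would split this into two regimes. When $1-\chi^2(\rho) \geq s^{-2}$, it suffices to prove the $s$-independent bound $|\chi'|+|\chi''| \lesssim_\delta (1-\chi^2)^{1-\delta/2}$, which follows from the infinite-order flatness of $\chi$ at $\rho = 0$ and $\rho = 1$: for the explicit $\chi$ above one verifies $1-\chi(\rho) \sim e^{-1/\rho}$ and $|\chi'(\rho)| + |\chi''(\rho)| \lesssim \rho^{-4} e^{-1/\rho}$ near $\rho = 0$, so that $|\chi'(\rho)|/(1-\chi(\rho))^{1-\delta/2} \sim \rho^{-2} e^{-\delta/(2\rho)} \to 0$ as $\rho \to 0^+$; the step near $\rho = 1$ is easier because $1-\chi^2$ stays close to $1$ there. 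When $1-\chi^2(\rho) \leq s^{-2}$, the same flat profile forces $\rho \lesssim 1/\log s$, on which interval one computes directly $|\chi'(\rho)| + |\chi''(\rho)| \lesssim (\log s)^{C} s^{-2} \leq C_\delta s^{-(2-\delta)}$; combining both regimes closes the estimate.

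The main technical subtlety I anticipate is the matching between the two contributions to $\zeta^2$ near $\rho = 0$ for large $s$, where both terms are small simultaneously. The resolution is precisely the flatness of the cut-off, whose derivatives decay faster than any polynomial in $1-\chi$, together with the ``room'' provided by the $\delta$-loss in the exponent, which absorbs the logarithmic factors $(\log s)^C$ arising from inverting $1-\chi(\rho) \sim e^{-1/\rho}$. No further ingredients beyond these elementary computations appear to be needed.
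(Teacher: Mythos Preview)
Your proposal is correct and follows essentially the same construction as the paper: set $\xi(s,r) = \chi(r - r^{\Hcal}(s))$ for a fixed $C^\infty$ profile $\chi$ that is flat at the endpoints, reduce to a one-dimensional estimate on $\chi$, and exploit the fact that the exponential flatness beats any polynomial loss coming from the $\delta$-exponent. The paper uses the integral bump $\chi(\rho)=K^{-1}\int_{-\infty}^{\rho}e^{-1/(\lambda(1-\lambda))}\,d\lambda$ rather than your ratio construction, but this is inessential.

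There is one worthwhile simplification in the paper's execution that you miss. You keep both contributions $s^{-2}$ and $1-\chi^2$ to $\zeta^2$ and split into two regimes according to which dominates; the paper observes that the $s^{-2}$ term is never needed. Since $r^2/(s^2+r^2)$ is uniformly bounded below on the transition interval (for all $s\ge s_0$, not just large $s$), one has $\zeta^2 \gtrsim 1-\xi^2 \gtrsim 1-\xi$ there, and it suffices to prove the single $s$-independent inequality $|\chi'(\rho)|+|\chi''(\rho)| \lesssim_\delta (1-\chi(\rho))^{1-\delta}$ on $[0,1]$. Your Regime~1 argument already establishes exactly this (and it holds for all $\rho\in[0,1]$, not only when $1-\chi^2\ge s^{-2}$), so your Regime~2 and the separate treatment of bounded $s$ are redundant. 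The paper also proves this one-variable inequality by a short H\"older argument ($\theta'\lesssim_\delta\theta^{1-\delta}$, then integrate) rather than by the direct asymptotic computation you use; both work, but the H\"older route avoids tracking the precise polynomial prefactors.
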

\begin{proof}
This is based on a refined property of the cut-off function. In fact we need to construct a smooth cut-off function $\xi$ with $\chi(\rho)\equiv 0$ for $\rho\leq 0$ and $\chi(\rho) \equiv 1$ for $\rho\geq 1$. Then let 
$$
\xi(s,r) := 1- \chi\big(r-r^{\Hcal}(s)\big),\quad r^{\Hcal}(s) = \frac{s^2-1}{2}.
$$
We then hope to have \eqref{eq4-05-aout-2025}. It is direct that
$$
\del_r\xi(s,r) = - \chi'\big(r-r^{\Hcal}(s)\big).
$$
On the other hand, we recall 
\begin{equation}\label{eq9-09-aout-2025}
\zeta^2 = 1 - |\delb_rT|^2 = \frac{s^2}{s^2+r^2} + \frac{(1- \xi^2(s,r))r^2}{s^2+r^2}
= \frac{s^2}{s^2+r^2} + \chi\big(r-r^{\Hcal}(s)\big)\frac{(1+\xi(s,r))r^2}{s^2+r^2}.
\end{equation}
Then for the estimate on $\del_r\xi$, it is thus sufficient to guarantee the following property:
\begin{equation}\label{eq2-05-aout-2025}
\chi'(\rho)\lesssim_{\delta} |\chi(\rho)|^{1- \delta}.
\end{equation}
To this end, we consider
\be
\theta(\rho):= 
\begin{cases}
0,\quad &x\leq 0, 
\\
\exp{\big(-\rho^{-1}(1- \rho)^{-1}\big)},\quad &0<x<1, 
\\
0,\quad &x\geq 1, 
\end{cases}
\ee
and
\begin{equation}\label{eq3-18-aout-2025}
\chi(\rho):= K^{-1}\int_{- \infty}^{\rho}\theta(\lambda)\diff \lambda,\quad K = \int_{\RR}\theta(\lambda)\diff \lambda.
\end{equation}
It is easily checked that $\theta,\chi$ are well defined and smooth. Furthermore, on the interval $(0,1)$ we have 
\be
\theta'(\rho) = \theta(\rho) \frac{1-2\rho}{\rho^2(1- \rho)^2}.
\ee
The key is that the polynomial singularity in the right-hand side near $\rho\rightarrow0^+$ can be absorbed by an exponential decreasing term $|\theta(\rho)|^{\delta}$ with any $\delta>0$. Thus
\begin{equation}
0\leq \theta'(\rho)\lesssim_{\delta}\theta^{1- \delta}(\rho),\quad\text{for}\quad 0< \rho\leq 1/2.
\end{equation}
Integrate the above inequality on $[0,\rho]$, and apply H\"older's inequality ($1/p = 1- \delta$, $1/q = \delta$),
$$
\aligned
0&\leq \chi'(\rho) = \theta(\rho) \lesssim_{\delta}\int_{[0,\rho]}\theta(\lambda)^{1- \delta}\diff\lambda
\leq \big(\int_{[0,\rho]}\theta(\lambda)\diff\lambda\Big)^{1- \delta}
\Big(\int_{[0,\rho]}1\diff \lambda\Big)^{\delta}
\leq \rho^{\delta}\chi^{1- \delta}(\rho) 
\\
&\leq |\chi(\rho)|^{1- \delta}
\endaligned
$$
which guarantees \eqref{eq2-05-aout-2025} in $[0,1/2]$. When $\rho\geq 1/2$, \eqref{eq2-05-aout-2025} is trivial because $0\leq \chi'(\rho)\leq 1$ and $\chi(\rho)\geq \chi(1/2)>0$. The estimate on $\del_r\del_r\xi$ is similar, we omit the details.
\end{proof}

%------------------------------------------------------------

\begin{remark}
We have established the following estimate
\begin{equation}\label{eq4-28-july-2025}
|\del_r\xi(s,r)| + |\del_r^2\xi(s,r)|\lesssim \zeta,
\quad
s^{-1}\lesssim \zeta.
\end{equation}
in \cite[Lemma 7.5]{PLF-YM-PDE}, who is a special case of  \eqref{eq4-05-aout-2025}.
\end{remark}

%-----------------------------------------------

We also recall the following estimate which we implicitly derived the special case \eqref{eq6b-09-aout-2025} in \cite{PLF-YM-PDE}.

\begin{lemma}
\label{lem1-28-july-2025}
In the region $\Mcal^{\EM}_{[s_0,s_1]}$, one has 
\begin{equation}\label{eq6-09-aout-2025}
|\delb_r\zeta|\lesssim_{\delta} \zeta^{1- \delta}\delb_rT,\quad |\delb_r\delb_rT|\lesssim_{\delta} \zeta^{2- \delta},\quad \forall 1\geq \delta>0.
\end{equation}
In particular
\begin{equation}\label{eq6b-09-aout-2025}
|\delb_r\zeta|\lesssim \delb_rT,\quad |\delb_r\delb_rT|\lesssim \zeta.
\end{equation}
\end{lemma}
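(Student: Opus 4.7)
The plan is to reduce both bounds to a single computation of $\delb_r\delb_r T$, since the two quantities are tightly linked. From the very definition $\zeta^2 = 1 - |\delb_r T|^2$ on $\Mcal^{\EM}_{[s_0,s_1]}$, differentiating in $r$ yields
\begin{equation}
2\zeta \, \delb_r \zeta = - 2 \, \delb_r T \cdot \delb_r \delb_r T ,
\end{equation}
so once the second inequality $|\delb_r\delb_r T|\lesssim_{\delta}\zeta^{2-\delta}$ is established, the first one follows immediately by dividing by $\zeta$ (which is strictly positive in the merging and Euclidean regions). So the real work is to estimate $\delb_r\delb_r T$.

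Next, I would compute $\delb_r\delb_r T$ directly from the ODE \eqref{eq3-28-july-2025}, writing $\delb_r T = r\xi(s,r)/\sqrt{s^2+r^2}$, and a routine differentiation gives the clean decomposition
\begin{equation}
\delb_r\delb_r T \;=\; \frac{s^2\,\xi(s,r)}{(s^2+r^2)^{3/2}} \;+\; \frac{r\,\del_r\xi(s,r)}{\sqrt{s^2+r^2}} .
\end{equation}
The two terms are then handled separately. For the second, Lemma~\ref{lem1-05-aout-2025} gives $|\del_r\xi|\lesssim_{\delta}\zeta^{2-\delta}$, and combined with $r/\sqrt{s^2+r^2}\leq 1$ this contribution is bounded by $\zeta^{2-\delta}$. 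For the first, the basic inequality $\zeta^2 \geq s^2/(s^2+r^2)$ read off from \eqref{eq9-09-aout-2025}, together with $\sqrt{s^2+r^2}\geq s\geq 1$, yields
\begin{equation}
\frac{s^2\,\xi}{(s^2+r^2)^{3/2}} \;\leq\; \frac{1}{\sqrt{s^2+r^2}}\cdot\frac{s^2}{s^2+r^2} \;\leq\; \zeta^2 \;\leq\; \zeta^{2-\delta},
\end{equation}
since $\zeta\leq 1$. Summing the two estimates yields the desired bound on $\delb_r\delb_r T$.

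Finally, plugging this back into the relation $\delb_r\zeta = -\delb_rT\,\delb_r\delb_r T/\zeta$ produces $|\delb_r\zeta|\lesssim_{\delta}\delb_r T\cdot \zeta^{1-\delta}$, which is the first claim. The special case \eqref{eq6b-09-aout-2025} is then obtained by observing that, since $\zeta$ is bounded above by $1$, one has $\zeta^{1-\delta}\leq \zeta^{1-1} = 1$ and $\zeta^{2-\delta}\leq \zeta$ by choosing $\delta = 1$ (more precisely one applies \eqref{eq4-28-july-2025} directly). I do not expect a serious obstacle here: the argument is essentially an interpolation between the geometric bound $\zeta^2\geq s^2/(s^2+r^2)$, which controls the ``purely geometric'' contribution with no loss, and the sub-polynomial decay of $\del_r\xi$ in powers of $\zeta$ established in Lemma~\ref{lem1-05-aout-2025}, which is precisely what forces the $\delta$-loss in the statement and cannot in general be removed near the inner transition $r\sim r^{\Hcal}$ where $\chi'$ is non-analytic.
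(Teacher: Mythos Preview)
Your proof is correct and follows essentially the same approach as the paper's: differentiate the defining ODE for $T$, bound the $\del_r\xi$ contribution via Lemma~\ref{lem1-05-aout-2025}, and control the remaining geometric term through $s^2/(s^2+r^2)\leq\zeta^2$. Your version is in fact more explicit than the paper's, which compresses everything into the single formula $\delb_r^2T = \frac{\xi + r\del_r\xi}{\sqrt{s^2+r^2}} - \frac{r^2\xi}{(s^2+r^2)^{3/2}}$ and then invokes \eqref{eq4-05-aout-2025} without further comment; your simplification to $\frac{s^2\xi}{(s^2+r^2)^{3/2}} + \frac{r\del_r\xi}{\sqrt{s^2+r^2}}$ and the explicit chain-rule reduction of $\delb_r\zeta$ to $\delb_r\delb_rT$ make the structure clearer.
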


\begin{proof} 
\bse
The result is based on \eqref{eq4-05-aout-2025}. 
We differentiate \eqref{eq3-28-july-2025} and obtain
\be
\delb_r^2T = \frac{\xi(s,r) + r\del_r\xi(s,r)}{\sqrt{s^2+r^2}} - \frac{r^2\xi(s,r)}{(s^2+r^2)^{3/2}}, 
\ee
which, in combination with \eqref{eq4-05-aout-2025} leads us to the desired estimate.
\ese
\end{proof}

We also need the following estimates.

\begin{lemma}\label{lem2-05-aout-2025}
In $\Mcal^{\ME}_{[s_0,s_1]}$,
\begin{equation}
|\del_t\delb_rT| = |J^{-1}\delb_rJ| \lesssim_{\delta} \zeta^{- \delta},\quad \Mcal^{\ME}_{[s_0,s_1]}.
\end{equation}
\end{lemma}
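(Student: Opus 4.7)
The plan is to first establish the claimed identity from the coordinate change $(s,x)\to(t,x)$, and then to bound the right-hand side pointwise in $\Mcal^{\ME}_{[s_0,s_1]}$ using the refined cut-off estimate from Lemma \ref{lem1-05-aout-2025}. For the identity, the Jacobian $\Psib$ in \eqref{eq6-10-april-2025} shows that $\del_t = J^{-1}\del_s$ when acting on functions expressed in the $(s,x)$ chart. Since the spatial coordinates agree in $(s,x)$ and $(t,x)$, $\delb_r$ coincides with the partial derivative $\del_r$ at fixed $s$; combining with Schwarz's theorem and the defining relation $J=\del_sT$, one gets
$$\del_t(\delb_rT) = J^{-1}\del_s\del_rT = J^{-1}\del_r\del_sT = J^{-1}\delb_rJ.$$

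For the bound, differentiating $\del_rT = r\xi(s,r)/\sqrt{s^2+r^2}$ in $s$ produces the decomposition
$$\delb_rJ = \frac{r\,\del_s\xi}{\sqrt{s^2+r^2}} - \frac{rs\,\xi}{(s^2+r^2)^{3/2}} =: I + II.$$
The term $II$ is easy: $|II|\lesssim s/(s^2+r^2)$, and combining with the lower bound $J\gtrsim \zeta^2 s$ from Lemma \ref{lem1-21-march-2025} together with the identity $\zeta^2(s^2+r^2) = s^2+(1-\xi^2)r^2 \geq s^2$ yields $|II|/J\lesssim 1\leq \zeta^{-\delta}$, since $\zeta\leq 1$. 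For $I$, I would use the explicit construction $\xi = 1-\chi(r-r^{\Hcal}(s))$ from Lemma \ref{lem1-05-aout-2025}, so that $\del_s\xi = s\,\chi'(r-r^{\Hcal}(s))$, and invoke the refined property $|\chi'|\lesssim_\delta \chi^{1-\delta}$ established there. From the expression \eqref{eq9-09-aout-2025} for $\zeta^2$ and the identity $\chi(r-r^{\Hcal}) = 1-\xi$, one gets
$$\chi(r-r^{\Hcal}) \leq \frac{\zeta^2(s^2+r^2)}{(1+\xi)r^2} \lesssim \frac{\zeta^2(s^2+r^2)}{r^2}.$$
In $\Mcal^{\ME}$ one has $r\geq r^{\Hcal}(s) \gtrsim s$, so $(s^2+r^2)^{1/2-\delta}\lesssim r^{1-2\delta}$ for $\delta\in(0,1/2)$. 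Combining these inputs,
$$|I| \lesssim_\delta \frac{rs}{\sqrt{s^2+r^2}}\Big(\frac{\zeta^2(s^2+r^2)}{r^2}\Big)^{1-\delta} \lesssim_\delta s\,\zeta^{2(1-\delta)},$$
and dividing by $J\gtrsim \zeta^2 s$ gives $|I|/J \lesssim_\delta \zeta^{-2\delta}$. Replacing $\delta$ by $\delta/2$ yields the stated bound.

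The main obstacle is the treatment of the term $I$ in the merging region, where $\xi$ is transitioning from $1$ to $0$ and $\zeta$ can be small. A naive estimate $|\del_s\xi|\lesssim s$ would cost $\zeta^{-2}$ after dividing by $J\sim \zeta^2 s$, which is far too lossy. The refined cut-off estimate of Lemma \ref{lem1-05-aout-2025} is precisely what allows us to trade $(1-\delta)$ powers of $\chi\sim 1-\xi$ —which in the transition region is comparable to $\zeta^2$ up to benign geometric factors— against the factor of $s$ coming from $\del_s r^{\Hcal}(s) = s$, absorbing essentially all of the apparent loss in $\zeta$ and leaving only the arbitrarily small exponent $\zeta^{-\delta}$.
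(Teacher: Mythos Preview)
Your proof is correct and follows essentially the same approach as the paper: establish the identity via the Jacobian and Schwarz's theorem, decompose $\delb_rJ$ into the same two terms, and bound each using the lower bound $J\gtrsim \zeta^2 s$ from Lemma~\ref{lem1-21-march-2025} together with the refined cut-off estimate from Lemma~\ref{lem1-05-aout-2025}. The only differences are cosmetic: the paper immediately converts $\del_s\xi=-s\,\del_r\xi$ and then invokes the packaged conclusion \eqref{eq4-05-aout-2025} (namely $|\del_r\xi|\lesssim_\delta\zeta^{2-\delta}$) to get $J^{-1}\cdot\frac{sr|\del_r\xi|}{\sqrt{s^2+r^2}}\lesssim_\delta\zeta^{-2}\cdot\zeta^{2-\delta}=\zeta^{-\delta}$ in one line, whereas you unpack that lemma's intermediate step $\chi'\lesssim_\delta\chi^{1-\delta}$ and carry the geometric factor $(s^2+r^2)/r^2$ explicitly, which costs a bit of extra algebra and a final relabeling $\delta\to\delta/2$.
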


\begin{proof}
\bse
Observe that
\be
\del_t\delb_rT = J^{-1}\delb_s\delb_rT = J^{-1}\delb_r\delb_sT = J^{-1}\delb_rJ.
\ee
Then we recall \cite[(A.1)]{PLF-YM-PDE} (or recompute directly) and obtain
\begin{equation}\label{eq1-05-aout-2025}
\del_t\delb_rT 
= -J^{-1}\Big(\frac{sr\del_r\xi(s,r)}{(s^2+r^2)^{1/2}} + \frac{sr\xi(s,r)}{(s^2+r^2)^{3/2}}\Big)
\end{equation}
Now we apply \eqref{eq1-21-march-2025}:
\begin{equation}\label{eq3-05-aout-2025}
0\leq J^{-1}\frac{-sr\del_r\xi(s/r)}{\sqrt{s^2+r^2}}
\lesssim \frac{s|\del_r\xi(s,r)|}{(1- \xi(s,r))s} = \frac{\del_r\xi(s,r)}{1- \xi(s,r)},
\end{equation}
and
\begin{equation}
0\leq J^{-1}\frac{sr\xi(s,r)}{(s^2+r^2)^{3/2}}\lesssim1.
\end{equation}
Then we apply Lemma~\ref{lem1-05-aout-2025} on \eqref{eq3-05-aout-2025}, more precisely, \eqref{eq4-05-aout-2025}, 
\be
0\leq J^{-1}\frac{-sr\del_r\xi(s/r)}{\sqrt{s^2+r^2}}
\lesssim_{\delta} \zeta^{-2}\zeta^{2- \delta}\leq \zeta^{- \delta}.
\ee
Here we have applied the fact $0\leq1- \zeta\leq \zeta^2$, and this leads us to the desired estimate.
\ese
\end{proof}

%--------------------------------------------------------------------------------------

\paragraph{Metric.} 

{

Throughout this Monograph, we focus on almost Euclidean metrics, i.e., 
\begin{equation}
g_{\alpha\beta} = g(\del_{\alpha,\del_{\beta}}) = \eta_{\alpha\beta} + H_{\alpha\beta}, 
\end{equation}
where  $\big(\eta_{\alpha\beta}\big) = \mathrm{diag}(-1,1,1,1)$ and $H_{\alpha\beta}$ enjoy certain smallness properties to be specified latter on. We denote by $\big(g^{\alpha\beta}\big)$ the inverse of $\big(g_{\alpha\beta}\big)$, and also set
\begin{equation}
g^{\alpha\beta} =: \eta^{\alpha\beta} + H^{\alpha\beta}
\end{equation}

\begin{lemma}\label{lem1-16-june-2025}
When $|H|:=\max_{\alpha\beta}|H_{\alpha\beta}|\leq \eps_s$ for some sufficiently small $\eps_s$, one has 
\begin{equation}
\big(g^{\alpha\beta}\big)^{-1} := \big(g_{\alpha\beta}\big)^{-1} = \eta^{\alpha\beta} - \eta^{\alpha\mu}H_{\mu\nu}\eta^{\nu\beta} + O(|H|^2),
\end{equation}
and, more precisely,
\begin{equation}
H^{\alpha\beta} := g^{\alpha\beta} - \eta^{\alpha\beta} = O(|H|^2) + 
\begin{cases}
-H_{00},\quad &(\alpha,\beta) = (0,0);
\\
H_{0a},\quad &(\alpha,\beta) = (0,a),\quad a=1,2,3;
\\
-H_{ab},\quad &(\alpha,\beta) = (a,b),\quad a,b = 1,2,3.
\end{cases}
\end{equation}
\end{lemma}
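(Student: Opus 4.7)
The plan is to view the identity as a first-order Neumann expansion of the matrix inverse of $(g_{\alpha\beta}) = (\eta_{\alpha\beta}) + (H_{\alpha\beta})$. Since $(\eta_{\alpha\beta})$ is invertible with inverse $(\eta^{\alpha\beta}) = \mathrm{diag}(-1,1,1,1)$, I would factor $g = \eta(I + \eta^{-1}H)$, where the product inside the parenthesis is a matrix product. For $|H|$ sufficiently small (so that the operator norm of $\eta^{-1}H$ in any fixed matrix norm is strictly less than $1$), the geometric (Neumann) series
\begin{equation}
(I + \eta^{-1}H)^{-1} = \sum_{k=0}^{\infty}(-\eta^{-1}H)^{k} = I - \eta^{-1}H + O(|H|^{2})
\end{equation}
converges, with the remainder controlled uniformly in the indices. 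Multiplying on the right by $\eta^{-1}$ and re-expressing in index notation yields the first claimed identity
\begin{equation*}
g^{\alpha\beta} = \eta^{\alpha\beta} - \eta^{\alpha\mu}H_{\mu\nu}\eta^{\nu\beta} + O(|H|^{2}).
\end{equation*}

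For the second part, I would simply unfold the quadratic expression $-\eta^{\alpha\mu}H_{\mu\nu}\eta^{\nu\beta}$ using $\eta^{00} = -1$, $\eta^{ab} = \delta^{ab}$, and the vanishing of mixed components of $\eta$. A component-by-component inspection gives
\begin{equation*}
-\eta^{0\mu}H_{\mu\nu}\eta^{\nu 0} = -(-1)H_{00}(-1) = -H_{00},
\qquad
-\eta^{0\mu}H_{\mu\nu}\eta^{\nu a} = -(-1)H_{0a}(1) = H_{0a},
\end{equation*}
and $-\eta^{a\mu}H_{\mu\nu}\eta^{\nu b} = -\delta^{ac}\delta^{bd}H_{cd} = -H_{ab}$, which matches the three cases of the piecewise formula. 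The $O(|H|^2)$ remainder is absorbed into the error term.

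The only subtlety worth noting is the smallness threshold $\eps_s$: it must be chosen so that the Neumann series converges in a matrix norm that also dominates the entrywise sup-norm $|H|$, which is automatic in the finite-dimensional setting. A quick Cauchy–Schwarz style bound
$\|\eta^{-1}H\|_{\mathrm{op}} \lesssim |H|$ with an absolute constant depending only on the dimension suffices, so any $\eps_s$ below a fixed universal constant works. There is no real obstacle here; the statement is a pointwise algebraic identity at each spacetime point and involves no analysis beyond a convergent geometric series.
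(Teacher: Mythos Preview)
Your proof is correct and follows essentially the same approach as the paper: the paper's proof simply invokes the Neumann series identity $(A+B)^{-1} = \sum_{k=0}^{\infty}(-A^{-1}B)^{k}A^{-1}$ with $A=\eta$ and $B=H$, which is exactly what you do, and you additionally spell out the component-by-component verification that the paper leaves implicit.
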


\begin{proof} The result follows from the identity
\be
(A+B)^{-1} = \sum_{k=0}^{\infty}\big(-A^{-1}B\big)^kA^{-1}, 
\ee
in which $A, B$ are $n\times n$ complex matrices, provided the right-hand side converges. 
\end{proof}

%-------------------------------------------------- 

\paragraph{Frame adapted to the foliation.}

Let us introduce the vector fields
\begin{equation}
\delb_s := \delb_0 = \delb_sT\del_t = J\del_t,
\qquad 
\delb_a:= \frac{\del T}{\del r} \frac{x^a}{r} \del_t + \del_a, 
\end{equation}
so that $\{\delb_0,\delb_a\}$ forms a frame.
The dual frame is written as
$$
\diff s = J^{-1}\big(\diff t - (x^a/r)\delb_rT \diff x^a\big),\quad
\diff x^a = \diff x^a.
$$
Let $S$ be any two-tensor  and consider its components 
\be
\overline{S}_{ab} = S(\delb_a,\delb_b), \qquad 
\overline{S}_{0a} = S(\delb_0,\delb _a), \qquad 
\overline{S}_{00} = S(\delb_0,\delb_0). 
\ee
With the notation $H_{\alpha\beta} := g_{\alpha\beta} - \eta_{\alpha\beta}$, we find 
\begin{equation}\label{eq1-18-march-2025}
\aligned
& \sigmab_{ab} = \gb_{ab} := (\delb_a,\delb_b)_g = \overline{\eta}_{ab} + \Hb_{ab} = \delta_{ab} - \Big(\frac{\del T}{\del r} \Big)^2\frac{x^ax^b}{r^2} +\Hb_{ab}.
\\
& \gb_{0a} = - \frac{x^a}{r} J \delb_rT  + \Hb_{0a}, \qquad \gb_{00} = - J^2 + \Hb_{00}.
\endaligned
\end{equation}
\begin{equation}\label{eq10-03-aout-2025}
\aligned
\gb_{00} & =J^2g_{00} = -J^2+J^2H_{00},
\\
\gb_{0a} & =J\frac{x^a}{r}\del_rTg_{00} + Jg_{a0} = -J\frac{x^a}{r}\del_rT + J\frac{x^a}{r}\del_rTH_{00} + JH_{a0},
\\
\sigmab_{ab} =: \gb_{ab} & =\frac{x^ax^b}{r^2}(\delb_rT)^2g_{00} 
+ \frac{x^a}{r}\delb_rTg_{b0} + \frac{x^b}{r}\delb_rTg_{a0}
+ g_{ab}
\\
& = \delta_{ab} - \frac{x^ax^b}{r^2}\big(\delb_rT\big)^2
+ \frac{x^ax^b}{r^2}(\delb_rT)^2H_{00} 
+ \frac{x^a}{r}\delb_rTH_{b0} + \frac{x^b}{r}\delb_rTH_{a0}
+ H_{ab}.
\endaligned
\end{equation}
More precisely, we have 
\begin{equation}
\aligned
& \Hb_{00} = J^2H_{00}, \quad \Hb_{0a} = J(x^a/r)\delb_rT H_{00} + JH_{0a}, 
\\
& \Hb_{ab} = (x^ax^b/r^2)(\delb_rT)^2H_{00} + (x^a/r)\delb_rH_{b0} + (x^b/r)\delb_rTH_{a0} + H_{ab}. 
\endaligned
\end{equation}
This leads us to, recalling that $|H| = \max_{\alpha,\beta}\{|H_{\alpha\beta}|\}$
\begin{equation}\label{eq3-14-july-2025}
|\Hb_{00}|\lesssim J^2|H|,\quad |\Hb_{0a}|\lesssim J|H|,\quad |\Hb_{ab}|\lesssim |H|.
\end{equation}
In combination with \eqref{eq2-14-june-2025}, this leads us to the estimate, provided that $|H|\lesssim 1$,
\begin{equation}\label{eq4-14-july-2025}
|\gb_{00}|\lesssim J^2,\quad |\gb_{0a}|\lesssim J,\quad |\gb_{ab}|\lesssim 1.
\end{equation}
We write $\big(\gb^{\alpha\beta}\big)$ for the inverse of $\big(\gb_{\alpha\beta}\big)$, with
\begin{equation}
\gb^{\alpha\beta} = \bar{\eta}^{\alpha\beta} + \Hb^{\alpha\beta}, 
\end{equation}
and we also set $\big(\sigmab^{ab}\big) = \big(\sigmab_{ab}\big)^{-1}$.
\begin{equation}\label{eq11-03-aout-2025}
\aligned
\gb^{00} & =J^{-2}\Big(g^{00} - 2\delb_rT\frac{x^a}{r}g^{a0} + (\delb_rT)^2\frac{x^ax^b}{r^2}g^{ab}\Big)
\\
& =-J^{-2}\zeta^2 + J^{-2}\Big(H^{00} - 2\delb_rT\frac{x^a}{r}H^{a0} + (\delb_rT)^2\frac{x^ax^b}{r^2}H^{ab}\Big),
\\
\gb^{0a} & =J^{-1}\Big(g^{0a} - \delb_rT\frac{x^b}{r}g^{ab}\Big)
=-J^{-1}\delb_rT\frac{x^a}{r} + J^{-1}H^{0a} - J^{-1}\delb_rT\frac{x^b}{r}H^{ab},
\\
\gb^{ab} & =g^{ab}.
\endaligned
\end{equation}

By definition of the Jacobi matrices  
$\Phib_{\alpha}^{\beta} = (\Phib)_{\beta\alpha}$ and $\Psib_{\alpha}^{\beta} = (\Psib)_{\beta\alpha}$ defined in \eqref{eq6-10-april-2025}, we have 
\begin{equation}
(\delb_0,\delb_1,\delb_2,\delb_3) = (\del_0,\del_1,\del_2,\del_3)\Phib,
\qquad
(\del_0,\del_1,\del_2,\del_3) = (\delb_0,\delb_1,\delb_2,\delb_3)\Psib, 
\end{equation}
therefore 
\be
\delb_{\alpha} = \Phib_{\alpha}^{\beta}\del_{\beta}, \qquad \del_{\alpha} = \Psib_{\alpha}^{\beta}\delb_{\beta}, 
\ee
together with 
\begin{subequations}\label{eq1-13-june-2025}
\begin{equation}
\Big(\overline{\Phi}_{\alpha}^{\beta}\Big)_{\beta\alpha}
=\left(\begin{array}{cccc}
J & \frac{x^1}{r}\delb_rT & \frac{x^2}{r}\delb_rT & \frac{x^3}{r}\delb_rT
\\
0 &1 &0 &0
\\
0 &0 &1 &0
\\
0 &0 &0 &1
\end{array}\right),
\end{equation}
\begin{equation}
\Big(\overline{\Psi}_{\alpha}^{\beta}\Big)_{\beta\alpha}
=\left(\begin{array}{cccc}
J^{-1} & \quad - \frac{x^1}{r}J^{-1}\delb_rT & \quad - \frac{x^2}{r}J^{-1}\delb_rT & \quad - \frac{x^3}{r}J^{-1}\delb_rT
\\
0 &1 &0 &0
\\
0 &0 &1 &0
\\
0 &0 &0 &1
\end{array}\right).
\end{equation}
\end{subequations}

%--------------------------------------------------- 

\paragraph{Null frame for the merging-Euclidean domain.}
Next, we also introduce the so-called \emph{null frame}
\begin{equation}
\delN_0 := \del_t, \qquad \delN_a := \frac{x^a}{r}\del_t + \del_a
\end{equation}
defined out of a fixed cone $\{r>\lambda t\}$ with $\lambda >0$.
The components of any two-tensor $S$ read
\be
S^{\Ncal}_{ab} = S(\delN_a,\delN_b), \qquad S^{\Ncal}_{a0} = S(\delN_a,\delN_0), \qquad 
S^{\Ncal}_{00} = S(\delN_0,\delN_0).
\ee
With the notation $H_{\alpha\beta} := g_{\alpha\beta} - \eta_{\alpha\beta}$, we have 
\begin{equation}\label{eq2-30-march-2025}
\aligned
&g^{\Ncal}_{ab} := (\delN_a,\delN_b)_g = \eta^{\Ncal}_{ab} + H^{\Ncal}_{ab} = \delta_{ab} - \frac{x^ax^b}{r^2} + H^{\Ncal}_{ab}.
\\
&g^{\Ncal}_{0a} = - \frac{x^a}{r} + H^{\Ncal}_{0a}, \qquad g^{\Ncal}_{00} = - 1 + H^{\Ncal}_{00}.
\endaligned
\end{equation}
\bse
The transition matrices between $\{\delN_{\alpha}\}$ and $\{\del_{\alpha}\}$, namely 
\be
\delN_{\alpha} = {\Phi^{\Ncal}}_{\alpha}^{\beta}\del_{\beta},
\qquad \del_{\alpha} = {\Psi^{\Ncal}}_{\alpha}^{\beta}\delN_{\beta},
\ee
are given by 
\be
\Big({\Phi^{\Ncal}}_{\alpha}^{\beta} \Big)_{\beta\alpha}
=
\left(
\begin{array}{cccc}
1 &(x^1/r) &(x^2/r) &(x^3/r)
\\
0 &1 &0 &0
\\
0 &0 &1 &0
\\
0 &0 &0 &1
\end{array}
\right)
\ee
and
\be
\Big({\Psi^{\Ncal}}_{\alpha}^{\beta} \Big)_{\beta\alpha}{\xout.}
=
\left(
\begin{array}{cccc}
1 & -(x^1/r) & -(x^2/r) & -(x^3/r)
\\
0 &1 &0 &0
\\
0 &0 &1 &0
\\
0 &0 &0 &1
\end{array}
\right).
\ee
We also observe that
\begin{equation}\label{eq1-14-july-2025}
\aligned
S^{\Ncal}_{\alpha\beta} & =  S_{\mu\nu}{\Phi^{\Ncal}}_{\alpha}^{\mu}{\Phi^{\Ncal}}_{\beta}^{\nu}
= S_{00} + 2(x^a/r)S_{a0} + (x^ax^b/r^2)S_{ab},
\quad
\\
S^{\Ncal \alpha\beta} & =  S^{\mu\nu}{\Psi^{\Ncal}}_{\mu}^{\alpha}{\Psi^{\Ncal}}_{\nu}^{\beta}
= S^{00} - 2(x^a/r)S^{a0} + (x^ax^b/r^2)S^{ab}.
\endaligned
\end{equation}
\ese
An important feature of this null frame (as well as the semi-hyperboloidal frame to be recalled later) is that the coefficients of the transition matrices are homogeneous of degree zero (out of a fixed cone). Thus
\begin{equation}
|S^{\Ncal}_{\alpha\beta}|\lesssim |S_{\alpha}|,\quad
|S^{\Ncal\alpha\beta}|\lesssim |S^{\alpha\beta}|.
\end{equation}
\paragraph{Semi-hyperboloidal frame in the hyperboloidal region.}

We recall that in $\Mcal_{[s_0,s_1]}\cap\{r\leq 3t\}$, the semi-hyperboloidal frame $\{\delu_{\alpha}\}$ is defined by
\begin{equation}\label{eq7-29-july-2025}
\delu_0 := \del_t, \qquad \delu_a: = \delb_a = (x^a/t)\del_t+\del_a.
\end{equation}
and the associated co-frame reads
\be
\thetau^0 = \diff t - (x^a/t)\diff x^a, \qquad \thetau^a = \diff x^a.
\ee
\bse
The transition matrices 
\begin{equation}\label{eq2-13-june-2025}
\Big(\Phiu_{\alpha}^{\beta}\Big)_{\beta\alpha} =
\left(\begin{array}{cccc}
1 & \frac{x^1}{t} & \frac{x^2}{t} & \frac{x^3}{t}
\\
0 &1 &0 &0
\\
0 &0 &1 &0
\\
0 &0 &0 &1
\end{array}\right),
\quad 
\Big(\Psiu_{\alpha}^{\beta}\Big)_{\beta\alpha} =
\Phiu^{-1} =
\left(\begin{array}{cccc}
1 & \quad - \frac{x^1}{t} & \quad - \frac{x^2}{t} & \quad - \frac{x^3}{t}
\\
0 &1 &0 &0
\\
0 &0 &1 &0
\\
0 &0 &0 &1
\end{array}\right)
\end{equation}
allow us to write 
\be
\delu_{\alpha} = \Phiu_{\alpha}^{\beta}\del_{\beta},
\qquad 
\del_{\alpha} = \Psiu_{\alpha}^{\beta}\delu_{\beta}.
\ee
\ese
A two-tensor $S$ can be written in the semi-hyperboloidal frame, that is, 
\be
\underline{S}_{\alpha\beta} = S(\delu_{\alpha},\delu_{\beta}), 
\qquad
\underline{S}^{\alpha\beta} = S(\theta^{\alpha},\theta^{\beta}).
\ee
We can also derive certain relations between the adapted frame and the semi-hyperboloidal frame. We compare \eqref{eq1-13-june-2025} and \eqref{eq2-13-june-2025} and obtain 
\be
\Psib_{\alpha}^0 = (s/t)^{-1}\Psiu_{\alpha}^0,
\qquad 
\Psib_{\alpha}^a = \Psiu_{\alpha}^a.
\ee
Thus we arrive at, in $\Mcal^{\Hcal}_{[s_0,s_1]}$
\begin{equation}\label{eq3-13-june-2025}
\overline{S}^{ab} = \underline{S}^{ab},
\qquad 
\overline{S}^{a0} = \overline{S}^{0a} = (s/t)^{-1}\underline{S}^{a0},
\qquad
\overline{S}^{00} = (s/t)^{-2}\underline{S}^{00}, 
\end{equation}
and, on the other hand, 
$$
\Phib_0^{\alpha} = (s/t)\Phiu_0^{\alpha},
\qquad
\Phib_a^{\alpha} = \Phiu_a^{\alpha}, 
$$
therefore
\begin{equation}\label{eq4-13-june-2025}
\overline{S}_{ab} = \underline{S}_{ab},
\qquad 
\overline{S}_{a0} = \overline{S}_{0a} = (s/t)\underline{S}_{a0},
\qquad
\overline{S}_{00} = (s/t)^2\underline{S}_{00} = (s/t)^2S_{00}.
\end{equation}
As mentioned for the null frame, for this frame, the elements of the transition matrices are \emph{homogeneous of degree zero}, and are uniformly bounded in $\Mcal^{\Hcal}_{[s_0,s_1]}$ ---particularly near the light-cone $\{r=t-1\}$. Thus we obtain with $|I| = p-k$ and $|J| = k$
\begin{equation}\label{eq1-18-june-2025}
|\del^IL^J\underline{S}_{\alpha\beta}|\lesssim \max_{\alpha,\beta}|S_{\alpha\beta}|_{p,k},
\qquad
|\del^IL^J\underline{S}^{\alpha\beta}|\lesssim \max_{\alpha,\beta}|S^{\alpha\beta}|_{p,k}. 
\end{equation}

%--------------------------------------------------------------------

\paragraph{Tangent-orthogonal frame.}

Finally, we will also use a \emph{metric-dependent frame}, consisting of $\{\delb_a\}$ and the future-oriented, unit normal  $\vec{n}$ to $\Mcal_s$. Namely, we define 
\begin{equation}\label{eq4-12-june-2025}
\vch_0 = \vec{n}, \qquad \vch_a = \delb_a
\end{equation}
and
\begin{equation}\label{eq5-12-june-2025}
\vec{n} = \lapsb^{-1}(\delb_0 - \betab^b\delb_b)
\end{equation}
Then, given that $\big(\sigmab^{ab}\big)$ (the inverse of $\big(\gb_{ab}\big)$), and the functions
\begin{equation}\label{eq6-12-june-2025}
\betab^b = \gb_{0a}\sigmab^{ab}, 
\quad
- \lapsb^2 = \gb_{00} - \gb_{0a}\sigmab^{ab}\gb_{b0} = \gb_{00} - \gb_{0a}\betab^a, 
\end{equation}
are called the {\bf shift function} and the {\bf lapse function}, respectively. The components of any two-tensor $S$ read
\be
\check{S}_{\alpha\beta} = S(\vch_{\alpha},\vch_{\beta}).
\ee
Furthermore, the dual frame of this tangent-orthogonal frame reads 
\begin{equation}\label{eq1-24-june-2025}
\check{\omega}^0 = \lapsb \diff s, \qquad \check{\omega}^a = \diff x^a + \betab^a\diff s.
\end{equation}

}

%---------------------------------------------------------------------------------------------------------------------------------------

\subsection{ Estimates on the volume of the Euclidean-hyperboloidal slices}
\label{section===72}

{ 

\paragraph{Expressions in flat geometry.}

It is convenient to list here several expressions in the flat case.

\begin{lemma}[Expressions associated with the adapted frame. Flat case]
\label{lem1-13-june-2025}
\bse
Suppose that $g=\eta$. One has 
\begin{equation}\label{eq3-18-mai-2025}
\aligned
& \overline{\eta}_{00} = -J^2 = -|\delb_sT|^2, \quad 
&& \overline{\eta}_{a0} = -(x^a/r)J\delb_rT,\
\\
& \overline{\eta}_{ab} = \delta_{ab} - |\delb_rT|^2(x^ax^b/r^2),
\\
& \overline{\eta}^{00} = - \zeta^2J^{-2}, \qquad 
&& \overline{\eta}^{0a} = -J^{-1}\delb_rT(x^a/r),
\quad 
\overline{\eta}^{ab} = \delta^{ab}.
\endaligned
\end{equation}
Furthermore, the unit normal vector is expressed as 
\begin{equation}
\aligned
\vec{n}_{\eta} & =  \zeta^{-1} \del_t + \zeta^{-1}\delb_rT(x^a/r)\del_a 
\\
& =  \zeta\del_t + \zeta^{-1}\delb_rT(x^a/r)\delb_a 
= J^{-1}\zeta\delb_s + \zeta^{-1}\delb_rT(x^a/r)\delb_a.
\endaligned
\end{equation}
Concerning the geometry of the slices $\Mcal_s$, the restricted metric $\sigmab_{\eta}$ is written as
\begin{equation}\label{eq1-17-mai-2025}
\sigmab_{\eta,ab} = \overline{\eta}_{ab} = \delta_{ab} - |\delb_rT|^2(x^ax^b/r^2),
\qquad
\sigmab_{\eta}^{ab} = \zeta^{-2}\frac{x^ax^b}{r^2}\big(\delb_rT\big)^2 + \delta^{ab} 
\end{equation}
and, consequently, 
\begin{equation}\label{eq7-13-june-2025}
|\sigmab_{\eta}^{ab}|\lesssim \zeta^{-2}.
\end{equation}
Moreover, the shift vector reads
\begin{equation}\label{eq1-27-june-2025}
\betab_{\eta}^a = -(x^b/r)\delb_rT J\zeta^{-2}.
\end{equation}
\ese
while the second fundamental form of $\Mcal_s$ takes the form 
\begin{equation}\label{eq1-03-aout-2025}
\Pi_{\eta,ab} := -(\vec{n}_{\eta},\nabla_{\delb_a}\delb_b)_g 
= \zeta^{-1}\Big(r^{-1}\delb_rT\big(\delta_{ab} - (x^ax^b/r^2)\big) + \del_r^2T(x^ax^b/r^2)\Big). 
\end{equation}
\end{lemma}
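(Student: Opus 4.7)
The plan is to derive each identity by direct substitution from the definitions, exploiting the Cartesian structure of Minkowski space. The essential inputs are the explicit forms of the adapted frame $\delb_0 = J\del_t$ and $\delb_a = (x^a/r)\delb_rT\,\del_t + \del_a$, together with the flat metric components $\eta_{00} = -1$, $\eta_{0a} = 0$, $\eta_{ab} = \delta_{ab}$, and the identity $\zeta^2 = 1 - (\delb_rT)^2$ valid at points where $\delb_rT$ coincides with its purely geometric value.

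First I would compute $\overline{\eta}_{\alpha\beta}$ by evaluating $\eta(\delb_\alpha, \delb_\beta)$; the three formulas in the first line of \eqref{eq3-18-mai-2025} follow from a one-line expansion using orthogonality of $\del_t$ and $\del_a$. The inverse components $\overline{\eta}^{\alpha\beta}$ can then be read off by specializing \eqref{eq11-03-aout-2025} to $H_{\alpha\beta} = 0$, or equivalently by inverting the $4\times 4$ block matrix directly using the transition matrices in \eqref{eq1-13-june-2025}. The spatial-block inversion producing $\sigmab_\eta^{ab}$ is the Sherman--Morrison inverse of the rank-one update $\delta_{ab} - (\delb_rT)^2(x^ax^b/r^2)$: since $x^a/r$ has unit Euclidean norm, the inverse equals $\delta^{ab} + \frac{(\delb_rT)^2}{1-(\delb_rT)^2}(x^ax^b/r^2)$, and the substitution $1-(\delb_rT)^2 = \zeta^2$ yields \eqref{eq1-17-mai-2025}. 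The bound \eqref{eq7-13-june-2025} is then immediate from $0 \le (\delb_rT)^2 \le 1$ and $|x^ax^b/r^2| \le 1$.

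Next I would compute the shift and lapse from their definitions \eqref{eq6-12-june-2025}. Using the contraction identity $\sum_a (x^a/r)(x^ax^b/r^2) = x^b/r$, the sum $\betab_\eta^b = \overline{\eta}_{0a}\sigmab_\eta^{ab}$ collapses to $-J(x^b/r)\delb_rT \cdot \zeta^{-2}[\zeta^2 + (\delb_rT)^2] = -J(x^b/r)\delb_rT\,\zeta^{-2}$, matching \eqref{eq1-27-june-2025}. A parallel collapse in $-\lapsb^2 = \overline{\eta}_{00} - \overline{\eta}_{0a}\betab^a$ gives $\lapsb = J\zeta^{-1}$. Substitution into \eqref{eq5-12-june-2025} and re-expansion of the $\delb_a$ terms in the $\{\del_t,\del_a\}$ basis then produce both stated expressions for $\vec{n}_\eta$; the key algebraic simplification from the mixed form to the Cartesian form is again $\zeta + \zeta^{-1}(\delb_rT)^2 = \zeta^{-1}$.

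For the second fundamental form I would work directly on the embedding of $\Mcal_s$ as the graph $x \mapsto \bigl(T(s,|x|), x\bigr)$, so that $\delb_a$ is identified with $\Phi_{s,*}\del_a$. Since the flat Levi--Civita connection in Cartesian coordinates has vanishing Christoffel symbols, $\nabla_{\delb_a}\delb_b$ reduces to $\del_a\del_b T(s,|x|)\,\del_t$, and a direct chain-rule computation on $T(s,|x|)$ at fixed $s$ yields $r^{-1}\delb_rT\bigl(\delta_{ab}-x^ax^b/r^2\bigr) + (x^ax^b/r^2)\,\del_r^2T$. Pairing against $\vec{n}_\eta$ with $\eta(\vec{n}_\eta,\del_t) = -\zeta^{-1}$ then reproduces \eqref{eq1-03-aout-2025}. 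The one point I expect to have to be most careful about is the dual role of $\delb_rT$ --- as a function of $(s,r)$ in \eqref{eq3-28-july-2025} versus as a scalar field on $\Mcal$ in the frame definition; performing the differentiation at fixed $s$ (i.e.\ intrinsically along the slice) avoids any need to differentiate through the implicit relation $s = s(t,x)$ and keeps the computation transparent.
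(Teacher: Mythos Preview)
Your proposal is correct. The paper states this lemma without proof, presenting it as a list of routine expressions ``convenient to list here'' for later reference; there is no argument in the paper to compare against. Your plan follows the only natural route---direct substitution from the frame definitions $\delb_0 = J\del_t$, $\delb_a = (x^a/r)\delb_rT\,\del_t + \del_a$---and all the key algebraic steps you outline (the Sherman--Morrison inversion for $\sigmab_\eta^{ab}$, the collapse $\zeta^2 + (\delb_rT)^2 = 1$ in the shift and lapse computations, the graph-embedding viewpoint for the second fundamental form) are correct. Your final caveat about the dual role of $\delb_rT$ is also well-placed: the paper explicitly notes this convention just after \eqref{eq6-10-april-2025}, and your choice to differentiate at fixed $s$ when computing $\nabla_{\delb_a}\delb_b$ is the right way to handle it, since $\delb_a$ is tangent to the slice and the second fundamental form is independent of how $\delb_b$ is extended off the slice.
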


\begin{lemma}[Expressions associated with the semi-hyperboloidal frame in the hyperboloidal domain. Flat case]
\label{lem2-13-june-2025}
\bse
Suppose that $g=\eta$ and consider the hyperboloidal domain. One has 
\begin{equation}
J = \zeta = (s/t), \quad \delb_rT = (r/t), 
\end{equation}
\begin{equation}
\underline{\eta}_{00} = -1,
\qquad 
\underline{\eta}_{a0} = \underline{\eta}_{0a} = - \frac{x^a}{t},
\qquad
\underline{\eta}_{ab} = - \frac{x^ax^b}{t^2} + \delta_{ab}, 
\end{equation}
\begin{equation}
\underline{\eta}^{00} = -(s/t)^2,
\qquad 
\underline{\eta}^{a0} = \underline{\eta}^{0a} = - \frac{x^a}{t},
\qquad
\underline{\eta}^{ab} = \delta^{ab}. 
\end{equation}
\ese
\bse
Concerning the geometry of the slices $\Mcal_s$, one has 
\begin{equation}
\vec{n}_{\eta} = (t/s)\del_t + (x^a/s)\del_a = (s/t)\del_t + (x^a/s)\delu_a, 
\end{equation}
\begin{equation}
\nabla_{\delu_a}\delu_b = t^{-1}\underline{\eta}_{ab}\del_t,
\end{equation}
\begin{equation}
\Pi_{\eta,ab} = -(\vec{n}_{\eta},\nabla_{\delb_a}\delb_b)_g = s^{-1}\underline{\eta}_{ab}.
\end{equation}
\ese
\end{lemma}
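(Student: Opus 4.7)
The plan is to first pin down the scalar quantities $T$, $J$, $\delb_rT$, $\zeta$ explicitly in the hyperboloidal region, and then feed these into direct computations for the frame components, the unit normal, the connection, and the second fundamental form. By construction of the foliation, $\xi(s,r)\equiv 1$ on $\Mcal^{\Hcal}_s$, so the ODE \eqref{eq3-28-july-2025} reduces to $\del_r T=r/\sqrt{s^2+r^2}$ with $T(s,0)=s$, which integrates to $T=\sqrt{s^2+r^2}$, i.e. $t^2-r^2=s^2$. Differentiating gives $J=\delb_sT=s/t$, $\delb_rT=r/t$, and hence $\zeta^2=1-|\delb_rT|^2=s^2/t^2$. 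This matches the first line of the lemma and specializes the general formulas of Lemma~\ref{lem1-13-june-2025} to the hyperboloidal region.

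For the frame components, I would simply plug $\delu_0=\del_t$ and $\delu_a=(x^a/t)\del_t+\del_a$ into $\underline{\eta}_{\alpha\beta}=\eta(\delu_\alpha,\delu_\beta)$, using $\eta=\mathrm{diag}(-1,1,1,1)$. This gives $\underline{\eta}_{00}=-1$, $\underline{\eta}_{0a}=-x^a/t$, and $\underline{\eta}_{ab}=\delta_{ab}-x^ax^b/t^2$. For the inverse components, I would read off the dual coframe $\thetau^0=\diff t-(x^a/t)\diff x^a$, $\thetau^a=\diff x^a$ and compute $\underline{\eta}^{\alpha\beta}=\eta^{-1}(\thetau^\alpha,\thetau^\beta)$; in particular $\underline{\eta}^{00}=-1+r^2/t^2=-(s/t)^2$, $\underline{\eta}^{0a}=-x^a/t$, $\underline{\eta}^{ab}=\delta^{ab}$.

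For the unit normal, I would use that $s=\sqrt{t^2-r^2}$ is a Minkowski-timelike time function with $\diff s=(t/s)\diff t-(x^a/s)\diff x^a$ and $|\diff s|_\eta^2=-1$, so the future-oriented unit normal is $\vec n_\eta=-\eta^{\alpha\beta}\del_\beta s\,\del_\alpha=(t/s)\del_t+(x^a/s)\del_a$. Substituting $\del_a=\delu_a-(x^a/t)\del_t$ and simplifying with $t^2-r^2=s^2$ yields the equivalent expression $(s/t)\del_t+(x^a/s)\delu_a$. For the connection, since the Levi-Civita connection of $\eta$ is flat in the $\{\del_\alpha\}$ basis, $\nabla_X Y=X(Y^\gamma)\del_\gamma$; writing $\delu_b$ with components $\delu_b^0=x^b/t$, $\delu_b^c=\delta^c_b$, a routine differentiation gives $\delu_a(x^b/t)=\delta_{ab}/t-x^ax^b/t^3=t^{-1}\underline{\eta}_{ab}$ and $\delu_a(\delta^c_b)=0$, producing $\nabla_{\delu_a}\delu_b=t^{-1}\underline{\eta}_{ab}\del_t$.

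Finally, for the second fundamental form I would use its definition together with $\delb_a=\delu_a$ in the hyperboloidal region to get $\Pi_{\eta,ab}=-\eta(\vec n_\eta,\nabla_{\delu_a}\delu_b)=-t^{-1}\underline{\eta}_{ab}\,\eta(\vec n_\eta,\del_t)$. Since $\eta(\vec n_\eta,\del_t)=-t/s$, this gives $\Pi_{\eta,ab}=s^{-1}\underline{\eta}_{ab}$. I do not expect a main obstacle: each step is a short direct calculation once the closed form $t=\sqrt{s^2+r^2}$ is in hand. The only place to be slightly careful is the sign and normalization in the unit-normal computation, which I would double-check against the expression given in Lemma~\ref{lem1-13-june-2025} for consistency; alternatively, one may simply specialize \eqref{eq1-03-aout-2025} using $\del_r^2 T=s^2/t^3$ and $\delb_rT=r/t$ to re-derive $\Pi_{\eta,ab}=s^{-1}\underline{\eta}_{ab}$ as an internal consistency check.
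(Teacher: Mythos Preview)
Your proposal is correct and follows exactly the approach the paper has in mind: the lemma is stated without proof, as a list of direct computations once one knows $t=\sqrt{s^2+r^2}$ in the hyperboloidal region, and your verification of each formula (frame components via contraction, normal via the gradient of $s$, connection and second fundamental form via the flat Levi-Civita derivative) is precisely that routine check.
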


%---------------------------------------------------------------------

\paragraph{Volume form of Euclidean-hyperboloidal slices.}

Before we can analyze any given system of PDEs within the Euclidean-hyperboloidal framework, a preliminary and important issue is guaranteeing that $\Mcal_s$ are spacelike and enjoy a certain non-degeneracy property, measured in terms of the volume form $\det{(\sigmab_{ab})}\diff x$. When $g$ coincides with $\eta$, a direct calculation leads us to
\begin{equation}\label{eq2-19-march-2025}
\det(\overline{\eta}_{ab}) = 1- \Big(\frac{\del T}{\del r} \Big)^2 = \zeta^2
\qquad \text{ when } g=\eta, 
\end{equation}
so that 
\begin{equation}
\sqrt{\det \overline{\eta}_{ab}}= \zeta
=\begin{cases}
(s/t), \qquad & \mathcal{M}^{\Hcal},
\\
1, \qquad & \mathcal{M}^{\Ecal}, 
\end{cases}
\qquad \text{ when } g=\eta. 
\end{equation}
A natural way to guarantee this property for the curved metric is to proceed as follows. 

{
\begin{lemma}
\label{lem1a-02-march-2025}
There exists a universal constant $\eps_s>0$, such that if
\begin{equation} \label{eq1-05-03-2025}
|\Hb| := \max_{a,b}\{|\Hb_{ab}|\}\leq \eps_s\zeta^2
\qquad \text{ in the domain } \mathcal{M}_{[s_0,s_1]}, 
\end{equation} 
then one has 
\begin{equation}
\big|\sqrt{\det \sigmab} - \zeta\big|\lesssim \zeta^{-1}|\Hb_{ab}|\leq \frac{1}{2}\zeta.
\end{equation}
\end{lemma}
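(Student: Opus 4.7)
The plan is to write $\sigmab_{ab}=\sigmab_{\eta,ab}+\Hb_{ab}$, factor the determinant as
\[
\det \sigmab \;=\; \det \sigmab_{\eta}\cdot \det\bigl(\mathrm{I}_3+\sigmab_{\eta}^{-1}\Hb\bigr),
\]
and use the flat-case identity $\det\sigmab_{\eta,ab}=\det\overline{\eta}_{ab}=\zeta^{2}$ established in \eqref{eq2-19-march-2025}. The entries of the inverse of the flat spatial metric are controlled by \eqref{eq7-13-june-2025}, namely $|\sigmab_{\eta}^{ab}|\lesssim \zeta^{-2}$. Combined with the assumption \eqref{eq1-05-03-2025}, $|\Hb|\leq \eps_s\zeta^{2}$, this yields the entrywise bound
\[
\bigl|(\sigmab_{\eta}^{-1}\Hb)^{a}{}_{b}\bigr| \;\lesssim\; \zeta^{-2}\,|\Hb| \;\lesssim\; \eps_s,
\]
so that the matrix $M:=\sigmab_{\eta}^{-1}\Hb$ is small in $\mathrm{End}(\RR^3)$ provided $\eps_s$ is chosen small.

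Next, I would expand the $3\times 3$ determinant explicitly as
\[
\det(\mathrm{I}_3+M) \;=\; 1+\mathrm{tr}(M)+\tfrac{1}{2}\bigl((\mathrm{tr}M)^{2}-\mathrm{tr}(M^{2})\bigr)+\det M,
\]
which gives the elementary estimate $|\det(\mathrm{I}_3+M)-1|\lesssim |M|$ once $|M|$ is universally bounded. Substituting the bound $|M|\lesssim \zeta^{-2}|\Hb|$, this yields
\[
\bigl|\det\sigmab-\zeta^{2}\bigr| \;=\; \zeta^{2}\,\bigl|\det(\mathrm{I}_3+M)-1\bigr| \;\lesssim\; |\Hb|.
\]
In particular, for $\eps_s$ sufficiently small, $\det\sigmab\geq \tfrac{3}{4}\zeta^{2}$, so $\sqrt{\det\sigmab}$ is comparable to $\zeta$.

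Finally, I would apply the conjugate factorization
\[
\sqrt{\det\sigmab}-\zeta \;=\; \frac{\det\sigmab-\zeta^{2}}{\sqrt{\det\sigmab}+\zeta}
\]
and use $\sqrt{\det\sigmab}+\zeta\geq \tfrac{3}{2}\zeta$ from the lower bound above. This gives the desired inequality $|\sqrt{\det\sigmab}-\zeta|\lesssim \zeta^{-1}|\Hb|$, and the assumption \eqref{eq1-05-03-2025} then forces $\zeta^{-1}|\Hb|\leq \eps_s\zeta \leq \tfrac{1}{2}\zeta$ after possibly shrinking $\eps_s$.

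The one point requiring some care is the scaling asymmetry in the assumption: the factor $\zeta^{2}$ in $|\Hb|\leq \eps_s\zeta^{2}$ is exactly what is needed to absorb the $\zeta^{-2}$ blow-up of $\sigmab_{\eta}^{-1}$ near the light cone in the merging and Euclidean regions. This is why the assumption involves $\zeta^{2}$ rather than $1$; any weaker assumption (e.g. $|\Hb|\leq \eps_s\zeta$) would fail to keep $M$ small and the determinant expansion would not be justified. Beyond this scaling bookkeeping, the argument is purely linear-algebraic and the remaining computations are routine.
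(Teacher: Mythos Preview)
Your proof is correct and essentially coincides with the paper's. Both arguments establish $|\det\sigmab-\zeta^2|\lesssim |\Hb|$ and then apply the identical conjugate factorization $\sqrt{\det\sigmab}-\zeta=(\det\sigmab-\zeta^2)/(\sqrt{\det\sigmab}+\zeta)$. The only cosmetic difference is that the paper obtains the determinant bound by directly expanding $\det\sigmab=\det\sigmab_{\eta}+T_1(\Hb)+T_2(\Hb)+T_3(\Hb)$ as a polynomial in the $\Hb_{ab}$ with bounded coefficients (since $|\sigmab_{\eta,ab}|\le 2$), whereas you first factor $\det\sigmab=\zeta^2\det(\mathrm{I}_3+\sigmab_{\eta}^{-1}\Hb)$ and invoke the inverse bound \eqref{eq7-13-june-2025}; the two routes are equivalent and your scaling remark about why $\zeta^2$ (not $\zeta$) is needed in the hypothesis is a useful clarification.
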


\begin{proof} We easily compute 
$
\det\sigmab =  \det\sigmab_{\eta}
+ T_1(\Hb) 
+ T_2(\Hb) 
+ T_3(\Hb),
$
in which $T_1(\Hb)$ are linear terms, $T_2(\Hb)$ are quadratic terms, and $T_3(\Hb)$ are cubic terms.  More precisely, the coefficients of these terms are bounded. Then we find 
$$
\big\vert\det\sigmab - \det\sigmab_{\eta} \big\vert \lesssim 
C \, \big\vert\Hb_{ab} \big\vert\leq \frac{1}{2} \zeta, 
$$
provided that $\eps_s$ sufficiently small. This implies that $\det\sigmab>0$ and
$$\hfill 
\big|\sqrt{\det\sigmab} - \sqrt{\det\sigmab_{\eta}}\big|
= \Big|\frac{\det\sigmab - \det\sigmab_{\eta}}{\sqrt{\det\sigmab} + \sqrt{\det\sigmab_{\eta}}}\Big|
\leq \frac{C\vert\Hb_{ab} \vert}{\sqrt{\det\sigmab_{\eta}}}. \hfill \qedhere
$$
\end{proof}
}

%------------------------------------------------------------------------------------------------------------------

While the above estimate would in principe be sufficient for the application in the present Monograph (but would make the proofs considerably more complicated), we prefer to propose a more precise estimate which is wider interest, as follows. 

\begin{proposition}\label{prop1-14-june-2025}
Let $g_{\alpha\beta} = \eta_{\alpha\beta} + H_{\alpha\beta}$ be a metric defined in (a subset of) $\Mcal_{[s_0,s_1]}$. Assume the following uniform spacelike condition: 
\begin{subequations}\label{eq-USA-condition}
\begin{equation}\label{eq1-14-june-2025}
H^{\Ncal 00} := g(\diff t- \diff r, \diff t- \diff r)<0
\qquad \text{in }\{3t/4\leq r\leq r^{\Ecal}(s)\}\cap \Mcal_{[s_0,s_1]}
\end{equation}
with, for a sufficiently small $\eps_s>0$, 
\begin{equation}\label{eq2-14-june-2025}
|H| := \max_{\alpha,\beta}|H|\leq \eps_s\zeta, \quad \text{in }\Mcal_{[s_0,s_1]}. 
\end{equation}
\end{subequations}
Then the $3\times3$ matrix $\gb_{ab}$ is invertible and satisfies
\begin{equation}\label{eq5-15-june-2025}
C_0(\zeta^2+|H^{\Ncal00}|)\geq\det(\gb_{ab})\geq c_0\big(\zeta^{2} + |H^{\Ncal 00}|\big)
\end{equation}
with universal constants $C_0>c_0>0$.
\end{proposition}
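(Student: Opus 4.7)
The strategy is to diagonalize the principal (flat) part of $\sigmab_{ab}$ along the radial direction $\nu := x/r$: \eqref{eq1-17-mai-2025} shows that $\sigmab_{\eta,ab}$ equals $\zeta^2$ on $\nu$, the identity on $\nu^\perp$, and vanishes off-diagonally between them. I first handle the easy subregions. When $r \leq 3t/4$, integrating the bound $\delb_r T \leq r/\sqrt{s^2+r^2}$ yields $t \leq \sqrt{s^2+r^2}$, hence $(\delb_r T)^2 \leq r^2/t^2 \leq 9/16$ and $\zeta \geq \sqrt{7}/4$; direct perturbation together with $|\Hb_{ab}| \lesssim |H| \leq \epss$ then gives $\det \gb_{ab} = \zeta^2 + O(\epss) \asymp 1 \asymp \zeta^2 + |H^{\Ncal 00}|$. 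When $r \geq r^{\Ecal}(s)$, $\xi = 0$ forces $\delb_r T = 0$ and $\zeta = 1$, so the bound is trivial. The substantive case is the near-light-cone band $\{3t/4 \leq r \leq r^{\Ecal}(s)\}$, where $\zeta$ may degenerate and the uniform spacelike assumption \eqref{eq1-14-june-2025} must supply the positivity.

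The algebraic heart is the identity
\[
\Hb^{rr} := (x^a x^b/r^2)\, \Hb_{ab} = -(\delb_r T)^2\, H^{\Ncal 00} + O(\zeta^2 |H|) + O(|H|^2),
\]
obtained by expanding $\Hb_{ab}$ from its definition in \eqref{eq1-18-march-2025}, writing $H^{\Ncal 00} = -H_{00} - 2(x^a/r)H_{a0} - (x^ax^b/r^2)H_{ab} + O(|H|^2)$ via Lemma~\ref{lem1-16-june-2025}, and using $1 - (\delb_r T)^2 = \zeta^2$ together with $1 - \delb_r T = \zeta^2/(1+\delb_r T) \leq \zeta^2$. Under $H^{\Ncal 00} < 0$ one has $-(\delb_r T)^2 H^{\Ncal 00} = (\delb_r T)^2 |H^{\Ncal 00}| \geq 0$, and $|H| \leq \epss \zeta$ makes the error $O(\epss \zeta^2)$, so
\[
\gb(\nu,\nu) = \zeta^2 + \Hb^{rr} = \bigl(1 + O(\epss)\bigr)\zeta^2 + (\delb_r T)^2 |H^{\Ncal 00}|.
\]

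Passing to an orthonormal frame $\{\nu, e_2, e_3\}$ of $\RR^3$, I view $\gb_{ab}$ as a block matrix $\bigl(\begin{smallmatrix} a & b^T \\ b & D \end{smallmatrix}\bigr)$ with $a = \gb(\nu,\nu)$, $D_{ij} = \gb(e_i, e_j) = I_{2\times 2} + O(|H|)$ (because $\sigmab_\eta$ restricts to the identity on $\nu^\perp$), and $b_i = \gb(\nu, e_i) = O(|H|)$ (because $\sigmab_\eta$ vanishes off-diagonally). The Schur complement formula $\det \gb_{ab} = \det D \cdot (a - b^T D^{-1} b)$ then gives
\[
\det \gb_{ab} = \bigl(1 + O(\epss)\bigr)\bigl(\zeta^2 + \Hb^{rr} + O(\epss^2 \zeta^2)\bigr).
\]
Since $(\delb_r T)^2 + \zeta^2 = 1$ and $|H^{\Ncal 00}| \lesssim \epss$, one has $\zeta^2 + |H^{\Ncal 00}| = (1 + O(\epss))\zeta^2 + (\delb_r T)^2 |H^{\Ncal 00}|$, which matches the expression just obtained for $\gb(\nu,\nu)$; this yields \eqref{eq5-15-june-2025} with universal constants $c_0, C_0$ upon choosing $\epss$ small enough. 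The main obstacle I foresee is careful bookkeeping of remainder terms: one has to exploit the \emph{weighted} smallness $|H| \leq \epss \zeta$ rather than merely $|H| \leq \epss$, in order to absorb $O(|H|^2) = O(\epss^2 \zeta^2)$ into the $\zeta^2$ contribution so that it does not contaminate the $|H^{\Ncal 00}|$ piece in the lower bound.
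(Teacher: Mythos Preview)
Your proof is correct and takes essentially the same approach as the paper's (Section~\ref{section=N22}): both split off the radial direction, use the key identity $g(\delb_r,\delb_r)=\zeta^2-H^{\Ncal00}+O(\zeta^2|H|+|H|^2)$ (the paper's Lemma~\ref{lem1-14-june-2025}, equivalent to your $\Hb^{rr}$ computation), and then reduce the determinant via a block/Schur decomposition in an orthonormal frame adapted to $\nu=x/r$. The only cosmetic difference is that the paper constructs the tangential frame $(X_i,Y_i)$ locally on the sphere and appeals to compactness of $\mathcal S(s,r)$ for uniform bounds on the transition matrices $A_i$, whereas you work pointwise---a harmless simplification since the determinant is invariant under the orthogonal change of basis and no smoothness of $e_2,e_3$ is required.
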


A detailed proof of this result is provided in Section~\ref{section=N22}. Here, we content with a heuristic argument. Let $X,Y$ be a (locally defined) orthonormal frame of a Euclidean sphere (with respect to the induced metric on the sphere form the Minkowski metric), which is a sub-manifold of a slice $\Mcal_s$. Let $\delb_r = (x^a/r)\delb_a$. Then $\{\delb_r,X,Y\}$ forms a (locally defined) orthogonal frame of $\Mcal_s$. Here remark that $|\delb_r|_{\eta} = \zeta^2$. Thus to guarantee that $g(\delb_r,\delb_r)>0$, one is only permitted to perturb $g$ within a size of $\eps_s\zeta^2$ (which is Lemma~\ref{lem1a-02-march-2025}). However, the light-bending condition \eqref{eq1-14-june-2025} in fact says that, modulo higher order terms, $g(\delb_r,\delb_r)>0$. We can thus relax considerably the demand on the decreasing rate of the perturbation (comparing \eqref{eq2-14-june-2025} and \eqref{eq1-05-03-2025}).

Based on Proposition~\ref{prop1-14-june-2025}, we can establish  several basic estimates on various geometric objects associated with the Euclidean-hyperboloidal slices, stated now in Claims~\ref{lem1-02-march-2025} and \ref{cla1-16-june-2025}, as well as in Claim~\ref{cor1-16-june-2025}, below. 

Our first observation provides us with a global estimate on the volume form of curved slices $\Mcal_s$.
}

\begin{claim}\label{lem1-02-march-2025}
Assume that the uniform spacelike condition \eqref{eq-USA-condition} holds for a sufficiently small $\eps_s$. Then, for some constants $K_1>K_0>0$, one has 
\begin{equation}\label{eq5-16-june-2025}
K_0 \big(\zeta + |H^{\Ncal00}|^{1/2}\big) \leq\sqrt{\det(\gb_{ab})}\leq K_1\big(\zeta + |H^{\Ncal00}|^{1/2}\big).
\end{equation}
\end{claim}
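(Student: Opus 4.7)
The plan is to derive this claim as an essentially immediate corollary of Proposition~\ref{prop1-14-june-2025}, whose proof is deferred to Section~\ref{section=N22} and contains the genuine geometric content (the quantitative control of $\det(\gb_{ab})$ using the light-bending condition). Once that two-sided bound on $\det(\gb_{ab})$ is available, the passage to $\sqrt{\det(\gb_{ab})}$ with the claimed equivalence is only an elementary scalar estimate.

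First I would invoke Proposition~\ref{prop1-14-june-2025} directly: under the uniform spacelike condition \eqref{eq-USA-condition}, there exist universal constants $C_0 > c_0 > 0$ such that
\begin{equation*}
c_0 \bigl(\zeta^{2} + |H^{\Ncal 00}| \bigr) \;\leq\; \det(\gb_{ab}) \;\leq\; C_0 \bigl(\zeta^{2} + |H^{\Ncal 00}| \bigr).
\end{equation*}
In particular $\det(\gb_{ab}) > 0$ throughout $\Mcal_{[s_0,s_1]}$, so its square root is well-defined, and taking square roots yields
\begin{equation*}
\sqrt{c_0}\,\bigl(\zeta^{2} + |H^{\Ncal 00}|\bigr)^{1/2} \;\leq\; \sqrt{\det(\gb_{ab})} \;\leq\; \sqrt{C_0}\,\bigl(\zeta^{2} + |H^{\Ncal 00}|\bigr)^{1/2}.
\end{equation*}

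Next I would close the loop with the elementary equivalence $\sqrt{a^2+b^2} \asymp a+b$ valid for all $a,b\geq 0$; more precisely, one has $\tfrac{1}{\sqrt{2}}(a+b) \leq \sqrt{a^{2}+b^{2}} \leq a+b$. Applying this with $a=\zeta$ and $b = |H^{\Ncal 00}|^{1/2}$ gives
\begin{equation*}
\tfrac{1}{\sqrt 2}\bigl(\zeta + |H^{\Ncal 00}|^{1/2}\bigr) \;\leq\; \bigl(\zeta^{2}+|H^{\Ncal 00}|\bigr)^{1/2} \;\leq\; \zeta + |H^{\Ncal 00}|^{1/2},
\end{equation*}
so that the conclusion \eqref{eq5-16-june-2025} follows with $K_{0} := \sqrt{c_{0}/2}$ and $K_{1} := \sqrt{C_{0}}$.

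The only genuine obstacle in the entire argument is hidden inside Proposition~\ref{prop1-14-june-2025}, where one must show that the $3\times 3$ Gram matrix $\gb_{ab} = g(\delb_a,\delb_b)$ retains a lower bound on its determinant of size $\zeta^{2} + |H^{\Ncal 00}|$ \emph{despite} the fact that $\delb_r = (x^a/r)\delb_a$ has $\eta$-length only of order $\zeta^2$ in the transition region, so that a naive perturbation argument (as in Lemma~\ref{lem1a-02-march-2025}) would force the restrictive condition $|\Hb| \lesssim \zeta^{2}$. The point is that the light-bending condition \eqref{eq1-14-june-2025} effectively contributes a positive term $|H^{\Ncal 00}|$ to $g(\delb_r,\delb_r)$, which one must carefully isolate by decomposing $\gb_{ab}$ along the orthogonal sphere frame $\{\delb_r, X, Y\}$. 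That computation, however, is precisely what Section~\ref{section=N22} carries out, so at the level of the present claim there is nothing further to prove.
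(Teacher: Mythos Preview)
Your proof is essentially correct and follows the paper's approach. There is one minor technical point you glossed over: Proposition~\ref{prop1-14-june-2025} (i.e.\ estimate~\eqref{eq5-15-june-2025}) is only established in Section~\ref{section=N22} for the near-cone region $\{3t/4 \leq r \leq r^{\Ecal}(s)\}$, since the key step there invokes Corollary~\ref{cor1-14-june-2025}, which requires $H^{\Ncal 00} < 0$ --- a hypothesis guaranteed by \eqref{eq1-14-june-2025} only in that region. The paper's proof of the claim therefore splits into two cases: in $\{3t/4 \leq r \leq r^{\Ecal}(s)\}$ it applies \eqref{eq5-15-june-2025} exactly as you do, while in the complementary region it invokes Lemma~\ref{lem1a-02-march-2025} instead, using that $\zeta \geq \sqrt{7}/4$ and $|H^{\Ncal00}| \lesssim \eps_s$ there (so $\zeta + |H^{\Ncal00}|^{1/2} \simeq \zeta$ and the naive perturbation bound suffices). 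This is an easy fix, and your last paragraph already anticipates the underlying geometric issue.
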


\begin{proof}
We only need to apply \eqref{eq5-15-june-2025} in the region $\{3t/4\leq r\leq r^{\Ecal}(s)\}$, and Lemma~\ref{lem1a-02-march-2025} for the remaining region, because there $\zeta\geq \sqrt{7}/4$ and $\big|H^{\Ncal00}\big|\lesssim \eps_s$.
\end{proof}

For the simplicity of expression, we denote by 
\begin{equation}\label{eq9-08-aout-2025}
\zetab := \sqrt{\zeta^2 + |H^{\Ncal00}|}.
\end{equation}
and it is clear that
\begin{equation}
\zeta\lesssim \zetab.
\end{equation}

\begin{claim}\label{cla1-16-june-2025}
Let  $\big(\sigmab^{ab}\big)$ be the inverse of $\big(\gb_{ab}\big)$. Under the uniform spacelike condition \eqref{eq-USA-condition} for a sufficiently small $\eps_s$, one has 
\begin{equation}\label{eq2-16-june-2025}
|\sigmab^{ab}|\lesssim \zetab^{-2}
\end{equation}
Furthermore, one has 
\begin{equation}\label{eq4-16-june-2025}
|\sigmab^{ab}\delb_b|\lesssim \zetab^{-1}.
\end{equation}
\end{claim}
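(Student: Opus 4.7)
The plan is to derive both estimates from Cramer's rule applied to the $3\times 3$ matrix $(\gb_{ab})$, combined with the lower bound on $\det(\gb_{ab})$ already established in Claim~\ref{lem1-02-march-2025}. Writing
\[
\sigmab^{ab} \;=\; \frac{(-1)^{a+b}\, M_{ab}}{\det(\gb_{cd})},
\]
where $M_{ab}$ denotes the $(a,b)$-minor of $(\gb_{cd})$, the denominator is bounded below by $c_0\,\zetab^2$ thanks to \eqref{eq5-16-june-2025}. For the numerator, I would invoke \eqref{eq4-14-july-2025}, which gives $|\gb_{ab}|\lesssim 1$ under the smallness hypothesis on $H$; every $2\times 2$ sub-determinant is then uniformly bounded, and one immediately concludes $|\sigmab^{ab}|\lesssim \zetab^{-2}$, which is the first stated inequality.

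For the second inequality, I would interpret the vector $V^a := \sigmab^{ab}\delb_b$ (with the index $a$ fixed and summation over $b$) as the $\sigmab$-dual of $\delb_a$ on the slice $\Mcal_s$. A short computation exploiting $\sigmab^{bc}\sigmab_{bd}=\delta^c{}_d$ gives the clean identity
\[
|V^a|_{\sigmab}^2 \;=\; \sigmab^{ab}\sigmab^{ac}\sigmab_{bc} \;=\; \sigmab^{aa}\qquad (\text{no sum on }a),
\]
so that combining this with the already-established bound $\sigmab^{aa}\lesssim \zetab^{-2}$ yields $|V^a|_{\sigmab}\lesssim \zetab^{-1}$, as asserted. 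Since $V^a$ is tangent to the spacelike slice $\Mcal_s$, the $\sigmab$-norm coincides with the spacetime $g$-norm restricted to $\Mcal_s$, so the estimate holds in either interpretation.

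There is no serious analytic obstacle once Claim~\ref{lem1-02-march-2025} is available: the argument is essentially linear algebra packaged with the determinant lower bound. The one subtlety worth flagging is the interpretation of the norm in the second inequality, which must be the induced Riemannian norm on $\Mcal_s$ in order to exploit the cancellation encoded in $|V^a|^2_{\sigmab}=\sigmab^{aa}$; a crude bound combining only $|\sigmab^{ab}|\lesssim \zetab^{-2}$ and $|\delb_b|\lesssim 1$ would yield $\zetab^{-2}$ and miss the square-root improvement. As a consistency check, one can verify the estimate on the flat model using \eqref{eq1-17-mai-2025}: in the hyperboloidal region with $g=\eta$, one computes $\sigmab_\eta^{aa} = \zeta^{-2}(x^a)^2/t^2 + 1 \lesssim \zeta^{-2}$, since $|x^a|/s\lesssim r/s\lesssim \zeta^{-1}$, in agreement with the claim.
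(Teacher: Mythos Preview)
Your proposal is correct and matches the paper's proof essentially step for step: the paper also invokes the adjugate (Cramer's rule), bounds the cofactors using the uniform bound on $\gb_{ab}$, and divides by the determinant lower bound from \eqref{eq5-16-june-2025}; for the second estimate the paper performs exactly your computation $g(\sigmab^{ab}\delb_b,\sigmab^{ac}\delb_c)=\sigmab^{aa}$. Your remark that the crude bound $|\sigmab^{ab}|\,|\delb_b|$ would only give $\zetab^{-2}$ and that the $\sigmab$-norm identity is what buys the square-root improvement is precisely the point.
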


\begin{proof}
We recall the relation
$$
(\sigmab^{ab})_{ab} = \det((\gb)_{ab})^{-1}\mathrm{adj}((\gb)_{ab}).
$$
We only need to point out that
$$
\big|\mathrm{adj}((\gb)_{ab}) - \mathrm{adj}((\overline{\eta})_{ab})\big|\lesssim |H|.
$$
Then, in view of \eqref{eq5-16-june-2025}, \eqref{eq2-16-june-2025} follow. For \eqref{eq4-16-june-2025}, we note that
$$
g(\sigmab^{ab}\delb_b,\sigmab^{ac}\delb_c) = \sigmab^{ab}\sigmab^{ac}\gb_{bc} = \sigmab^{aa}, 
$$
which is bounded by $\zetab^{-2}$.
\end{proof}

%---------------------------------------------------------------------------------------------------------------------------------------- 

\subsection{Estimates on lapse and shift functions}

{

We relate the volume form and the lapse, as follows. 

\begin{lemma}[Decomposition of the volume form]\label{lem1-25-mai-2025} 
Suppose that $\big(\gb_{ab}\big)$ is invertible. Then, with the induced orientation on $\Mcal_s$, one has 
\begin{equation}\label{eq1-24-mai-2025}
\mathrm{Vol}_g = \lapsb \, \mathrm{Vol}_{\sigmab}\wedge\diff s
\end{equation}
\end{lemma}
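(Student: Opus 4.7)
The plan is a standard ADM-type volume decomposition carried out in the adapted chart $(s,x^a)$. First, I would observe that in the adapted frame $\{\delb_0,\delb_a\}$ the three vectors $\delb_a=(x^a/r)\delb_r T\,\del_t+\del_a$ annihilate the function $s$ and are therefore tangent to $\Mcal_s$. Consequently the restriction of $g$ to $T\Mcal_s$ expressed in the basis $\{\delb_a\}$ is nothing but the matrix $(\gb_{ab})$, so that $\sigmab_{ab}=\gb_{ab}$ and in particular $\det\sigmab=\det(\gb_{ab})$.

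Next, I would compute $\det(\gb_{\alpha\beta})$ via a Schur complement, relying on the assumed invertibility of $(\gb_{ab})$. Writing the block decomposition
\begin{equation*}
(\gb_{\alpha\beta})=\begin{pmatrix}\gb_{00} & \gb_{0b}\\ \gb_{a0} & \gb_{ab}\end{pmatrix},
\end{equation*}
the standard identity yields $\det(\gb_{\alpha\beta})=\det(\gb_{ab})\cdot\bigl(\gb_{00}-\gb_{0a}\sigmab^{ab}\gb_{b0}\bigr)$. By the very definition of the lapse in \eqref{eq6-12-june-2025} the second factor equals $-\lapsb^{2}$, and therefore
\begin{equation*}
\det(\gb_{\alpha\beta})=-\lapsb^{2}\,\det\sigmab.
\end{equation*}
Taking square roots (with $\lapsb>0$, which is fixed by requiring $\vec{n}$ to be future-oriented) gives $\sqrt{-\det\gb_{\alpha\beta}}=\lapsb\sqrt{\det\sigmab}$.

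Finally, I would translate this into the statement about differential forms. In the coordinates $(s,x^a)$ the spacetime volume form is $\mathrm{Vol}_g=\sqrt{-\det\gb_{\alpha\beta}}\,\diff s\wedge \diff x^1\wedge \diff x^2\wedge \diff x^3$, while by definition $\mathrm{Vol}_{\sigmab}=\sqrt{\det\sigmab}\,\diff x^1\wedge \diff x^2\wedge \diff x^3$. Combining the two identities yields $\mathrm{Vol}_g=\lapsb\,\mathrm{Vol}_{\sigmab}\wedge \diff s$, provided that the induced orientation on $\Mcal_s$ is chosen so that $(\delb_1,\delb_2,\delb_3)$ together with $\vec{n}$ forms a positively oriented spacetime frame.

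There is no real analytic obstacle in this argument; the identity is purely algebraic once invertibility of $(\gb_{ab})$ is assumed. The only point requiring genuine care is bookkeeping of orientations and signs: one must verify that the chosen orientation of $\Mcal_s$ (induced by the future-oriented normal $\vec{n}$ and the spacetime orientation) is consistent with placing $\diff s$ to the right of $\mathrm{Vol}_{\sigmab}$ rather than to the left, so that no spurious sign appears.
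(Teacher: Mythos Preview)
Your proposal is correct and follows essentially the same approach as the paper: both compute $\det(\gb_{\alpha\beta})$ via a block elimination (the paper writes out the explicit congruence by upper/lower triangular matrices, you name it as the Schur complement) to obtain $\det\gb=-\lapsb^{2}\det\sigmab$, and then pass to the volume forms. The only cosmetic difference is that the paper fixes from the outset the ordering $\mathrm{Vol}_g=\sqrt{|\det\gb|}\,\diff x\wedge\diff s$, which sidesteps the orientation bookkeeping you flag at the end.
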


\begin{proof}
\bse
We write  
\be
\mathrm{Vol}_g = \sqrt{|\det{\gb}|} \, {\diff x}\wedge\diff s,
\ee
with 
$\gb = \big(\gb_{\alpha\beta}\big)_{\alpha,\beta}$ and ${\diff x}:=\diff x^1\wedge\diff x^2 \wedge \diff x^3$. 
With $V = (\gb_{10}, \gb_{20}, \gb_{30})^{\mathrm{T}}$, we have 
\be
W = - \gb^{-1} V = - \big(\sigmab^{1b}\gb_{b0},\sigmab^{2b}\gb_{b0},\sigmab^{3b}\gb_{b0}\big)^{\mathrm{T}}
\ee
with 
\begin{equation}\label{eq2-24-mai-2025}
\left(
\begin{array}{cc}
1 &W^{\mathrm{T}}
\\
0 &1
\end{array}
\right)
\left(
\begin{array}{cc}
\gb_{00} &V^{\mathrm{T}}
\\
V & \big(\gb_{ab}\big)
\end{array}
\right)
\left(
\begin{array}{cc}
1 &0 
\\
W&1
\end{array}
\right)
= 
\left(
\begin{array}{cc}
\gb_{00} + W^{\mathrm{T}} V &0
\\
0 & \big(\gb_{ab}\big)
\end{array}
\right).
\end{equation}
We then obtain 
\begin{equation}\label{eq2-18-oct-2025(l)}
\gb_{00} + W^{\mathrm{T}}V = \gb_{00} - \gb_{0a}\sigmab^{ab}\gb_{b0} = - \lapsb^2.
\end{equation}
Taking the determinant of  \eqref{eq2-24-mai-2025}, we find
\begin{equation}
\lapsb^2 \det \sigmab = - \det\gb, 
\end{equation}
which leads us to the identity 
\begin{equation}
\sqrt{|\det \gb|}{\diff x}\wedge\diff s = \lapsb  \sqrt{\det \sigmab}\,{\diff x}\wedge\diff s
\end{equation}
and we reach \eqref{eq1-24-mai-2025}.
\ese
\end{proof}

%---------------------------------------

As a by-product of our analysis, we now state an identity that we will use later on. 

\begin{lemma}\label{lem1-12-june-2025}
Provided $(\gb_{ab})$ is invertible, one has 
\begin{equation}\label{eq6-03-aout-2025}
\gb^{00} = - \lapsb^{-2},
\end{equation}
\begin{equation}
\sigmab^{ab} = \lapsb^2\gb^{0a}\gb^{0b} + \gb^{ab}.
\end{equation}
\end{lemma}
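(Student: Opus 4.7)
The plan is to bootstrap from the block factorization \eqref{eq2-24-mai-2025} that was already established in the proof of Lemma~\ref{lem1-25-mai-2025}, rather than redoing a full Schur-complement calculation from scratch. Both claims will fall out as the $(0,0)$, $(0,a)$, and $(a,b)$ entries of $\gb^{-1}$, read off in block form.

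First, I rewrite \eqref{eq2-24-mai-2025} in the compact form $U^{T}\gb\, U = \mathrm{diag}(-\lapsb^{2},\,(\gb_{ab}))$, where
$$
U = \begin{pmatrix} 1 & 0 \\ W & I_{3} \end{pmatrix},
\qquad W^{a} = -\sigmab^{ab}\gb_{b0},
$$
so that $U$ is lower triangular with unit diagonal and therefore invertible with $U^{-1} = \bigl(\begin{smallmatrix} 1 & 0 \\ -W & I_{3} \end{smallmatrix}\bigr)$. Note that the invertibility of $\gb$ as a whole follows automatically here from the invertibility of $(\gb_{ab})$ (the hypothesis) combined with $\lapsb^{2}\neq 0$, which is already built into the definition \eqref{eq2-18-oct-2025(l)} once $(\gb_{ab})$ is invertible (the quantity $-\lapsb^{2}$ is just the Schur complement).

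Next, I invert the block identity: since $U$ and the block-diagonal matrix on the right-hand side are both invertible,
$$
\gb^{-1} \;=\; U\, \mathrm{diag}\bigl(-\lapsb^{-2},\,(\sigmab^{ab})\bigr)\, U^{T}
\;=\; \begin{pmatrix} -\lapsb^{-2} & -\lapsb^{-2}\,W^{T} \\[2pt] -\lapsb^{-2}\,W & (\sigmab^{ab}) - \lapsb^{-2}\, W W^{T} \end{pmatrix}.
$$
Reading off the $(0,0)$ block immediately yields $\gb^{00} = -\lapsb^{-2}$, which is the first identity. Reading off the $(0,a)$ block gives $\gb^{0a} = -\lapsb^{-2} W^{a} = \lapsb^{-2}\sigmab^{ab}\gb_{b0}$, and in particular $\lapsb^{-2}W^{a}W^{b} = \lapsb^{2}\gb^{0a}\gb^{0b}$. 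Reading off the $(a,b)$ block then gives
$$
\gb^{ab} \;=\; \sigmab^{ab} - \lapsb^{-2} W^{a}W^{b} \;=\; \sigmab^{ab} - \lapsb^{2}\, \gb^{0a}\gb^{0b},
$$
which is the second identity after transposition.

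There is no real obstacle here beyond bookkeeping; the entire content is linear algebra, and all the nontrivial geometric input (the Schur-complement identity $\gb_{00} - \gb_{0a}\sigmab^{ab}\gb_{b0} = -\lapsb^{2}$ and the factorization \eqref{eq2-24-mai-2025}) has already been set up. The only point that deserves a brief remark is to make sure the reader identifies $\lapsb^{2}\gb^{0a}\gb^{0b}$ with $\lapsb^{-2}W^{a}W^{b}$, which is a one-line substitution using the expression for $\gb^{0a}$ just obtained.
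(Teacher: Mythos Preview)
Your proof is correct and takes essentially the same approach as the paper: both invert the block factorization \eqref{eq2-24-mai-2025} from the preceding lemma. The only cosmetic difference is that the paper writes the inverted identity as $U^{-1}\gb^{-1}(U^{T})^{-1} = \mathrm{diag}(-\lapsb^{-2},(\sigmab^{ab}))$ and reads off three scalar relations, whereas you compute $\gb^{-1} = U\,\mathrm{diag}(-\lapsb^{-2},(\sigmab^{ab}))\,U^{T}$ directly and extract the blocks; the algebra is identical.
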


\begin{proof}
\bse
The result follows from \eqref{eq2-24-mai-2025}. Indeed, by taking the inverse on both sides of \eqref{eq2-24-mai-2025}, we obtain 
\begin{equation}
\left(
\begin{array}{cc}
1 &0
\\
-W &1
\end{array}
\right)
\left(
\begin{array}{cc}
\gb^{00} &U^\mathrm{T}
\\
U &(\gb^{ab})
\end{array}
\right)
\left(
\begin{array}{cc}
1 & \quad - W^{\mathrm{T}}
\\
0 &1
\end{array}
\right)
=
\left(
\begin{array}{cc}
- \lapsb^{-2} &0
\\
0 & \big(\sigmab^{ab}\big)
\end{array}
\right)
\end{equation}
with $U^{\mathrm{T}} = (\gb^{10}, \gb^{20}, \gb^{30})^{\mathrm{T}}$. This leads us to
\be
\gb^{00} = - \lapsb^{-2},
\quad
U - \gb^{00}W = 0,
\quad 
-(U- \gb^{00}W)W^{\mathrm{T}} - WU^{\mathrm{T}} + \gb^{ab} = \sigmab^{ab},
\ee
and the desired result is reached. 
\ese
\end{proof}

%-----------------------------------------------------

We are now in a position to estimate the lapse function (thanks to Proposition~\ref{prop1-14-june-2025}).

\begin{claim}\label{cor1-16-june-2025}
Suppose that the uniform spacelike condition \eqref{eq-USA-condition} holds for a sufficiently small  $\eps_s$. Then, one has 
\begin{equation}\label{eq1a-16-june-2025}
K_0\lapsb_{\eta}\leq \lapsb \, \leq K_1\lapsb_{\eta},\quad \text{ in }  \{r\leq 3t/4\}\cap \{r\geq r^{\Ecal}(s)\},
\end{equation}
and
\begin{equation}\label{eq2-06-aout-2025}
K_0\lapsb_{\eta}\leq (\zetab/\zeta)\lapsb\leq K_1\lapsb_{\eta},\quad \text{ in }  \Mcal_{[s_0,s_1]}.
\end{equation}
with $0<K_0<K_1$. In particular, one has 
\begin{equation}
\lapsb\lesssim \lapsb_{\eta}.
\end{equation} 
\end{claim}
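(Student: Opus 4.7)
The plan is to combine three ingredients: the identity $\gb^{00}=-\lapsb^{-2}$ from Lemma~\ref{lem1-12-june-2025}, the explicit formula \eqref{eq11-03-aout-2025} for $\gb^{00}$, and an algebraic observation showing that the leading deviation from the flat case matches $|H^{\Ncal 00}|$ modulo controllable errors.

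First I would start from \eqref{eq11-03-aout-2025} and write
\[
-\gb^{00}=J^{-2}\bigl(\zeta^{2}-E\bigr),\qquad E:=H^{00}-2\delb_{r}T(x^{a}/r)H^{0a}+(\delb_{r}T)^{2}(x^{a}x^{b}/r^{2})H^{ab}.
\]
Expanding $H^{\Ncal 00}:=g(\diff t-\diff r,\diff t-\diff r)$ in coordinates yields
$H^{\Ncal 00}=H^{00}-2(x^{a}/r)H^{0a}+(x^{a}x^{b}/r^{2})H^{ab}$, so
\[
E-H^{\Ncal 00}=2\bigl(1-\delb_{r}T\bigr)(x^{a}/r)H^{0a}-\bigl(1-(\delb_{r}T)^{2}\bigr)(x^{a}x^{b}/r^{2})H^{ab}.
\]
The key algebraic identity is $1-(\delb_{r}T)^{2}=\zeta^{2}$ and hence $0\leq 1-\delb_{r}T\leq\zeta^{2}$; combined with the hypothesis $|H|\lesssim\eps_{s}\zeta$ from \eqref{eq2-14-june-2025} this produces the quantitative bound $|E-H^{\Ncal 00}|\lesssim\eps_{s}\zeta^{3}$.

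Next I would argue regionwise to convert $\zeta^{2}-E$ into $\zetab^{2}$. In the central strip $\{3t/4\leq r\leq r^{\Ecal}(s)\}$ the uniform spacelike condition \eqref{eq1-14-june-2025} gives $H^{\Ncal 00}<0$, hence $\zeta^{2}-H^{\Ncal 00}=\zeta^{2}+|H^{\Ncal 00}|=\zetab^{2}$. Outside this strip $\zeta$ is bounded away from zero (by $\sqrt{7}/4$ when $r\leq 3t/4$ and by $1$ when $r\geq r^{\Ecal}$), while $|H^{\Ncal 00}|\lesssim|H|\lesssim\eps_{s}$, so $\zetab\sim\zeta$ holds automatically. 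In either case, for $\eps_{s}$ small the $O(\eps_{s}\zeta^{3})$ error is absorbed by $\zetab^{2}$, yielding
\[
-\gb^{00}\sim J^{-2}\zetab^{2}.
\]
Inverting via $\lapsb^{2}=-1/\gb^{00}$ and comparing with $\lapsb_{\eta}^{2}=J^{2}/\zeta^{2}$ from Lemma~\ref{lem1-13-june-2025} gives $\lapsb^{2}\sim J^{2}/\zetab^{2}$, which is exactly \eqref{eq2-06-aout-2025} after rewriting as $(\zetab/\zeta)\lapsb\sim\lapsb_{\eta}$.

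The restricted estimate \eqref{eq1a-16-june-2025} then follows immediately: in each of the two regions listed I just showed $\zetab\sim\zeta$, so the factor $\zetab/\zeta$ in \eqref{eq2-06-aout-2025} is bounded above and below by universal constants. The final consequence $\lapsb\lesssim\lapsb_{\eta}$ is a direct corollary of \eqref{eq2-06-aout-2025} since $\zeta/\zetab\leq 1$. The main obstacle is the bookkeeping of smallness thresholds in the transition strip $3t/4\leq r\leq r^{\Ecal}$, where one is genuinely using the light-bending hypothesis to prevent $\zetab^{2}$ from being dwarfed by the error $\eps_{s}\zeta^{3}$; once the absorption is verified uniformly, the rest reduces to elementary algebra.
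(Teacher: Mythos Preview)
Your proposal is correct and follows essentially the same approach as the paper. The paper's proof starts from the one-form identity $\diff s = J^{-1}(\diff t - \delb_rT\diff r)$ and expands $g(\diff s,\diff s)$ by writing $\diff t-\delb_rT\diff r=(\diff t-\diff r)+(1-\delb_rT)\diff r$, factoring the correction as $\zeta^{2}R[H]$ with $|R[H]|\lesssim|H|$; you instead start from the coordinate formula \eqref{eq11-03-aout-2025} and bound $E-H^{\Ncal00}$ directly, but the two computations are algebraically identical (both rest on $1-\delb_rT=\zeta^{2}/(1+\delb_rT)\leq\zeta^{2}$). The paper packages the final absorption step as a separate lemma (Lemma~\ref{lem1-07-aout-2025}) with explicit constants $1/2$ and $2$, while you do the regionwise absorption inline; either way the argument is the same.
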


\begin{proof} 
\bse
Observe that
\be
\diff s = J^{-1}\diff t - J^{-1}(x^a/r)\delb_rT\diff r, 
\ee
so that, by Lemma~\ref{lem1-12-june-2025},  
\begin{equation}\label{eq1-08-aout-2025}
\aligned
-{\lapsb^{-2}} & = \, \gb^{00} = g(\diff s,\diff s) = J^{-2}g(\diff t - \delb_rT\diff r,\diff t - \delb_rT\diff r)
\\
& = \,-J^{-2}\zeta^2 + J^{-2}H(\diff t - \delb_rT\diff r,\diff t - \delb_rT\diff r)
\\
& = \,- \lapsb_{\eta}^{-2} 
+ J^{-2}H(\diff t - \diff r + (1- \delb_rT)\diff r,\diff t - \diff r + (1- \delb_rT)\diff r)
\\
& = \,- \lapsb_{\eta}^{-2} + J^{-2}H^{\Ncal00} 
+ J^{-2}\zeta^2\Big(\frac{2H(\diff t- \diff r,\diff r)}{1+\delb_rT} + \frac{1- \delb_rT}{1+\delb_rT}H(\diff r,\diff r)\Big), 
\endaligned
\end{equation}
hence 
\begin{equation}\label{eq4-23-july-2025}
\aligned
- \gb^{00} = {\lapsb^{-2}} = \lapsb_{\eta}^{-2} - J^{-2}H^{\Ncal00}  - J^{-2}\zeta^2R[H]
= \lapsb_{\eta}^{-2}\big(1- \zeta^{-2}H^{\Ncal00} - R[H]\big).
\endaligned
\end{equation}
Then, we obtain 
\begin{equation}\label{eq1-16-june-2025}
(1+\zeta^{-2}|H^{\Ncal00}|)\lapsb^2 = \frac{1+\zeta^{-2}|H^{\Ncal00}|}{1- \zeta^{-2}H^{\Ncal00} - R[H]}\lapsb_{\eta}^2,
\end{equation}
and by Lemma~\ref{lem1-07-aout-2025}, stated next, we arrive at \eqref{eq2-06-aout-2025}. For \eqref{eq1a-16-june-2025}, we only need to point out that, when $\{r\leq 3t/4\}\cup\{r\geq r^{\Ecal}(s)\}$ and provided that \eqref{eq-USA-condition} holds with a sufficiently small $\eps_s$, we have 
$0<C_1\leq(\zeta/\zetab)\leq 1$ where $C_1$ a constant.
\ese
\end{proof} 

%----------------------------------------------

\begin{lemma}\label{lem1-07-aout-2025}
Assume that \eqref{eq-USA-condition} holds with a sufficiently small $\eps_s$. Then one has 
\begin{equation}\label{eq1-07-aout-2025}
\frac{1}{2}\big(\zeta^2+|H^{\Ncal00}|\big)\leq \zeta^2-H^{\Ncal00} - \zeta^2R[H]\leq 2(\zeta^2 + |H^{\Ncal00}|).
\end{equation}
\end{lemma}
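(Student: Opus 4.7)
The plan is to split into two geometric regions based on whether the light-bending condition \eqref{eq1-14-june-2025} applies, and to treat the remainder $\zeta^2 R[H]$ as a small perturbation in each.

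From the derivation leading to \eqref{eq4-23-july-2025}, and the explicit form appearing in \eqref{eq1-08-aout-2025}, the remainder is
\begin{equation*}
R[H] = \frac{2H(\diff t-\diff r, \diff r)}{1+\delb_r T} + \frac{1-\delb_r T}{1+\delb_r T}\, H(\diff r, \diff r).
\end{equation*}
Since the coordinate components of $\diff r$ and $\diff t - \diff r$ are bounded by one, each bilinear pairing is controlled by $|H|$; combined with $1 \leq 1+\delb_r T \leq 2$ and the smallness condition \eqref{eq2-14-june-2025}, this gives $|R[H]| \leq C\,|H| \leq C \eps_s\, \zeta$. Using $\zeta \leq 1$,
\begin{equation*}
|\zeta^2 R[H]| \leq C \eps_s\, \zeta^3 \leq C \eps_s\, \zeta^2 \leq \tfrac14 \zeta^2
\end{equation*}
once $\eps_s$ is taken small enough.

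Inside the light-bending region $\{3t/4 \leq r \leq r^{\Ecal}(s)\}$, the condition \eqref{eq1-14-june-2025} yields $-H^{\Ncal 00} = |H^{\Ncal 00}|$, so
\begin{equation*}
\zeta^2 - H^{\Ncal 00} - \zeta^2 R[H] = \bigl(\zeta^2 + |H^{\Ncal 00}|\bigr) - \zeta^2 R[H],
\end{equation*}
and the bound $|\zeta^2 R[H]| \leq \tfrac14 \zeta^2 \leq \tfrac12(\zeta^2 + |H^{\Ncal 00}|)$ gives both sides of the claim immediately. In the complementary region $\{r < 3t/4\} \cup \{r > r^{\Ecal}(s)\}$, the quantity $\zeta$ admits a positive universal lower bound $c_0 > 0$ (in the Euclidean zone $\xi = 0$ gives $\zeta = 1$, and in the hyperboloidal zone $\zeta = s/t \geq \sqrt{7}/4$ when $r \leq 3t/4$). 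Then both $|H^{\Ncal 00}| \leq C \eps_s$ and $|\zeta^2 R[H]| \leq C \eps_s$ are absolutely small, while $\zeta^2 + |H^{\Ncal 00}| \sim \zeta^2 \geq c_0^2$; a direct computation delivers the two-sided bound once $\eps_s$ is taken small relative to $c_0$.

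The only subtlety is the final choice of $\eps_s$, which must be sufficiently small relative to the universal constant implicit in $|R[H]| \leq C|H|$ and to the lower bound $c_0$ in the complementary region. No genuine analytic obstacle arises; the argument is essentially a two-region perturbative calculation resting on the pointwise estimate of $R[H]$ and the sign information provided by the light-bending hypothesis.
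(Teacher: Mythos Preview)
Your proof is correct and follows essentially the same approach as the paper: split into the light-bending region $\{3t/4 \leq r \leq r^{\Ecal}(s)\}$ (where $-H^{\Ncal00} = |H^{\Ncal00}|$) and its complement (where $\zeta \geq \sqrt{7}/4$), then treat $\zeta^2 R[H]$ as a small perturbation via \eqref{eq2-14-june-2025}. The paper's version is more terse, but the substance is identical.
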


\begin{proof}
In the region $\{3t/4\leq r\leq r^{\Ecal}(s)\}$, we have $H^{\Ncal00}<0$. Thus \eqref{eq2-14-june-2025} leads us to \eqref{eq1-07-aout-2025}. Outside the region $\{3t/4\leq r \leq r^{\Ecal}(s)\}$, we have $\zeta \geq \sqrt{7}/4$. Then the double inequality \eqref{eq1-07-aout-2025} also holds thanks to \eqref{eq2-14-june-2025}.
\end{proof}

\begin{remark}
In view of \eqref{eq4-23-july-2025} and \eqref{eq9-08-aout-2025}, the estimate \eqref{eq1-07-aout-2025} can also written as
\begin{equation}\label{eq10-aout-2025}
\zetab^2\lesssim-J^2\gb^{00}\lesssim \zetab^2.
\end{equation}
\end{remark}

%---------------------------------------------------

}

Finally, for application later on, we also establish several estimates on the derivatives of the lapse function. For this purpose, we apply Lemma~\ref{lem1-12-june-2025}, and obtain
\begin{equation}\label{eq1-13-aout-2025}
\betab^b = \lapsb^2\gb^{0b}.
\end{equation}
Indeed, this is a consequence of 
\be
\aligned
\betab^b & =  \gb_{0a}\sigmab^{ab} = \gb_{0a}(\lapsb^2\gb^{a0}\gb^{0b} + \gb^{ab})
= \lapsb^2\big(\gb_{0\alpha}\gb^{\alpha0} - \gb_{00}\gb^{00}\big)\gb^{0b} 
+ \big(\gb_{0\alpha}\gb^{\alpha b} - \gb_{00}\gb^{0b}\big)
\\
& = \lapsb^2\big(1 + \lapsb^{-2}\gb_{00}\big)\gb^{0b}  - \gb_{00}\gb^{0b}
=\lapsb^2\gb^{0b}.
\endaligned
\ee

\begin{lemma}\label{lem1-26-july-2025}
Assume that the uniform spacelike condition \eqref{eq-USA-condition} holds for a sufficiently small $\eps_s$. Then  in $\Mcal^{\Hcal}_{[s_0,s_1]}$, one has 
\begin{equation}
\aligned
|L_a\lapsb|& \lesssim (s/t)^{-2}\lapsb^3\big(|H| + |LH|\big)\lesssim (s/t)^{-2}\big(|H| + |LH|\big),
\\
|\del_{\alpha}\lapsb|& \lesssim (s/t)^{-2}\lapsb^3|\del H| + (s/t)^{-3}s^{-1}\lapsb^3|H|
\lesssim (s/t)^{-2}|\del H| + (s/t)^{-3}s^{-1}|H|
\endaligned
\end{equation}
where
\be
|H| := \max_{\alpha\beta}\{|H_{\alpha\beta}|\},
\quad 
|LH|:=\max_{a,\alpha\beta}\{|L_{a}H_{\alpha\beta}|\},
\quad
|\del H|:=\max_{\gamma,\alpha,\beta} :=\{|\del_{\gamma}H_{\alpha\beta}|\}.
\ee
\end{lemma}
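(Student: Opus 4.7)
The plan is to reduce the problem to algebraic manipulations of the identity $\lapsb^{-2}=-\gb^{00}$ furnished by Lemma~\ref{lem1-12-june-2025}. Differentiation gives
\begin{equation*}
\del_\alpha \lapsb = \tfrac{1}{2}\lapsb^{3}\,\del_\alpha \gb^{00},\qquad L_a\lapsb=\tfrac{1}{2}\lapsb^{3}\,L_a\gb^{00},
\end{equation*}
so the task reduces to pointwise bounds on $|\del \gb^{00}|$ and $|L\gb^{00}|$ throughout $\Mcal^{\Hcal}_{[s_0,s_1]}$. The transition from the $\lapsb^{3}$-weighted bounds to the unweighted ones is then immediate from Corollary~\ref{cor1-16-june-2025} combined with the flat hyperboloidal identity $\lapsb_{\eta}=1$ (which follows from $-\bar\eta^{00}=\zeta^{2}/J^{2}=1$ since $J=\zeta=s/t$ in $\Mcal^{\Hcal}$), giving $\lapsb\lesssim 1$.

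I would then specialize the formula \eqref{eq11-03-aout-2025} to the hyperboloidal region, where $J=\zeta=s/t$ and $\delb_{r}T=r/t$. A direct substitution yields
\begin{equation*}
\gb^{00}=-1+(t/s)^{2}\,\Hu^{00},\qquad \Hu^{00}:=H^{00}-2(x^{a}/t)H^{0a}+(x^{a}x^{b}/t^{2})H^{ab},
\end{equation*}
where $\Hu^{00}$ is the semi-hyperboloidal $(0,0)$-component of the inverse perturbation. By \eqref{eq1-18-june-2025} one has $|\Hu^{00}|\lesssim|H|$; since the coefficients $x^{a}/t$ and $x^{a}x^{b}/t^{2}$ are homogeneous of degree zero and uniformly bounded in $\Mcal^{\Hcal}$, the same reasoning gives $|L_{a}\Hu^{00}|\lesssim|H|+|LH|$, while $\del_{\alpha}$ of these coefficients contributes an extra $O(t^{-1})$ factor, so that $|\del_{\alpha}\Hu^{00}|\lesssim|\del H|+t^{-1}|H|$.

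For the $L_{a}$ estimate I would exploit the two elementary identities $L_{a}s=0$ and $L_{a}t=x^{a}$, which give $L_{a}(t/s)=x^{a}/s$ and therefore
\begin{equation*}
|L_{a}((t/s)^{2})|\lesssim t\,|x^{a}|/s^{2}\lesssim(t/s)^{2}\quad\text{in }\Mcal^{\Hcal},
\end{equation*}
using $|x^{a}|\leq r\leq t$. Combined with the bounds on $\Hu^{00}$ above, this yields $|L_{a}\gb^{00}|\lesssim (t/s)^{2}(|H|+|LH|)$ and hence the asserted bound on $|L_{a}\lapsb|$. For the $\del_{\alpha}$ estimate, direct computation gives $\del_{t}(t/s)=-r^{2}/s^{3}$ and $\del_{a}(t/s)=tx^{a}/s^{3}$, both bounded by $t^{2}/s^{3}$, so that $|\del_{\alpha}((t/s)^{2})|\lesssim t^{3}/s^{4}=(t/s)^{3}/s$. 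Using $t\geq s$ to absorb the $t^{-1}|H|$ term into $(t/s)^{3}s^{-1}|H|$, I obtain
\begin{equation*}
|\del_{\alpha}\gb^{00}|\lesssim (t/s)^{2}|\del H|+(t/s)^{3}s^{-1}|H|,
\end{equation*}
which is exactly the claim after multiplying by $\tfrac{1}{2}\lapsb^{3}$.

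No structural cancellation is needed at any step; the main (and really only) obstacle is the careful bookkeeping of the mixed weights in $t/s$ and $s^{-1}$ produced by differentiating both the factor $(t/s)^{2}$ and the semi-hyperboloidal coefficients of $\Hu^{00}$, so that the sharp combination $(s/t)^{-2}|\del H|+(s/t)^{-3}s^{-1}|H|$ is actually recovered rather than an overestimate. The mild ambiguity between $|\del H|$ acting on the components $H_{\alpha\beta}$ and those of the inverse $H^{\alpha\beta}$ is harmless in view of Lemma~\ref{lem1-16-june-2025}.
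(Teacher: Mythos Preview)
Your proposal is correct and follows essentially the same route as the paper's proof: both start from $\lapsb^{-2}=-\gb^{00}$ (Lemma~\ref{lem1-12-june-2025}), differentiate to get $X\lapsb=\tfrac12\lapsb^{3}X(-\lapsb^{-2})$, specialize \eqref{eq11-03-aout-2025} to the hyperboloidal region where $J=\zeta=s/t$ and $\delb_{r}T=r/t$, and then track the weights in $(t/s)$ using $|L_{a}(s/t)|\lesssim(s/t)$ and $|\del_{\alpha}(s/t)|\lesssim s^{-1}$ together with $|XH^{\alpha\beta}|\lesssim|XH|$. Your write-up is in fact more explicit than the paper's, which compresses the computation into ``a direct calculation shows''; your identification $\gb^{00}=-1+(t/s)^{2}\Hu^{00}$ makes the weight bookkeeping transparent.
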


\begin{proof} 
\bse
Recall \eqref{eq6-03-aout-2025} and \eqref{eq11-03-aout-2025}. A direct calculation shows that
\begin{equation}
\big|\del_\alpha(\lapsb^{-2})\big|\lesssim (s/t)^{-3}s^{-1}|H| + (s/t)^{-2}|H|,
\end{equation}
\begin{equation}
\big|L_a(\lapsb^{-2})\big|\lesssim (s/t)^{-2}\big(|H| + |LH|\big).
\end{equation}
Here we have used the following fact (a high-order version will be established later)
$$
|XH^{\alpha\beta}|\lesssim |X H|
$$
with $X = L_a,\del_{\alpha}$, provided that \eqref{eq2-14-june-2025}. In the above calculation we have noticed that in $\{r\leq t-1\}$,
\begin{equation}
\big|\del_{\alpha}(s/t)\big|\lesssim s^{-1},
\quad
\big|L_a(s/t)\big|\lesssim (s/t).
\end{equation}
Finally, combining Claim~\ref{cor1-16-june-2025} and the relation
\be
X(\lapsb) = \frac{1}{2}\lapsb^3X(- \lapsb^{-2}),
\ee
we obtain the desired estimate.
\ese
\end{proof}

We also need a finer estimate which uses the condition $H^{\Ncal00}>0$.

\begin{lemma}\label{lem1-08-aout-2025}
Assume that the uniform spacelike condition \eqref{eq-USA-condition} holds for a sufficiently small $\eps_s$. Then one has 
\begin{equation}\label{eq3-08-aout-2025}
|L_a\lapsb|\lesssim \lapsb\frac{\zeta^{-2}|H^{\Ncal00}|_1}{1+|\zeta^{-2}H^{\Ncal00}|} + \lapsb|H|_1,\quad \text{in}\quad\Mcal^{\Hcal}_{[s_0,s_1]},
\end{equation}
\begin{equation}\label{eq4-08-aout-2025}
\big|\Omega_{ab}\lapsb\big|\lesssim \lapsb\frac{\zeta^{-2}|\Omega_{ab}H^{\Ncal00}|}{1+|\zeta^{-2}H^{\Ncal00}|} + \lapsb |H|_1,\quad \text{in}\quad\Mcal^{\ME}_{[s_0,s_1]}, 
\end{equation}
where
\be
|H|_1 = \max_{\alpha,\beta,a,b}\big\{|H_{\alpha\beta}|, |L_aH_{\alpha\beta}|, |\Omega_{ab}H_{\alpha\beta}|\}. 
\ee
\end{lemma}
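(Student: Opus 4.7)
The starting point is identity \eqref{eq4-23-july-2025},
\[
\lapsb^{-2}=\lapsb_{\eta}^{-2}\big(1-\zeta^{-2}H^{\Ncal00}-R[H]\big),
\]
together with the elementary relation $X\lapsb=\tfrac{1}{2}\lapsb^{3}X(-\lapsb^{-2})$ (valid for any vector field $X$) and the control
\[
\frac{\lapsb^{2}}{\lapsb_{\eta}^{2}}
=\frac{1}{1-\zeta^{-2}H^{\Ncal00}-R[H]}
\lesssim\frac{1}{1+\zeta^{-2}|H^{\Ncal00}|}
\]
provided by Lemma~\ref{lem1-07-aout-2025} (which uses $-H^{\Ncal00}\geq 0$ under the light-bending condition and the smallness of $R[H]$). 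The whole proof then amounts to differentiating the right-hand side of the identity and multiplying by $\tfrac{1}{2}\lapsb^{3}$.

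\textbf{The rotation case.} I first treat \eqref{eq4-08-aout-2025} in $\Mcal^{\ME}_{[s_{0},s_{1}]}$, which is the simpler one. Since $\Omega_{ab}=x^{a}\del_{b}-x^{b}\del_{a}$ annihilates $r$, $t$ (and hence $s$, $\xi(s,r)$, $J$, $\delb_{r}T$, $\zeta$ and $\lapsb_{\eta}$), differentiating \eqref{eq4-23-july-2025} yields
\[
\Omega_{ab}(-\lapsb^{-2})=\lapsb_{\eta}^{-2}\bigl(\zeta^{-2}\Omega_{ab}H^{\Ncal00}+\Omega_{ab}R[H]\bigr).
\]
The remainder $R[H]$ from \eqref{eq1-08-aout-2025} is bilinear in $H$ with coefficients built out of $\delb_{r}T\in[0,1]$, so $|\Omega_{ab}R[H]|\lesssim|H|_{1}$. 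Multiplying by $\tfrac{1}{2}\lapsb^{3}$, writing $\lapsb^{3}\lapsb_{\eta}^{-2}=\lapsb\cdot(\lapsb^{2}/\lapsb_{\eta}^{2})$ and invoking the denominator bound above yields the two desired terms, the $|H|_{1}$ piece being absorbed after noting that $(1+\zeta^{-2}|H^{\Ncal00}|)^{-1}\leq 1$.

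\textbf{The boost case.} Next I treat \eqref{eq3-08-aout-2025} in $\Mcal^{\Hcal}_{[s_{0},s_{1}]}$. Here $\xi\equiv 1$, so $T=\sqrt{s^{2}+r^{2}}$, $J=\zeta=s/t$, $\delb_{r}T=r/t$ and $\lapsb_{\eta}\equiv 1$. Since $L_{a}s=0$ and $L_{a}t=x^{a}$ (so in particular $L_{a}\lapsb_{\eta}^{-2}=0$), a direct computation gives
\[
\frac{L_{a}\zeta}{\zeta}=-\frac{x^{a}}{t},\qquad \Bigl|\frac{L_{a}\zeta^{-2}}{\zeta^{-2}}\Bigr|\leq 2,\qquad \bigl|L_{a}\delb_{r}T\bigr|\lesssim 1,
\]
so that
\[
L_{a}(-\lapsb^{-2})
=\lapsb_{\eta}^{-2}\Bigl(\zeta^{-2}L_{a}H^{\Ncal00}+H^{\Ncal00}L_{a}\zeta^{-2}+L_{a}R[H]\Bigr),
\]
and each of the last two factors satisfies $|H^{\Ncal00}L_{a}\zeta^{-2}|\lesssim\zeta^{-2}|H^{\Ncal00}|$ and $|L_{a}R[H]|\lesssim|H|_{1}$. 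Multiplying by $\tfrac{1}{2}\lapsb^{3}$ and applying again $\lapsb^{2}/\lapsb_{\eta}^{2}\lesssim(1+\zeta^{-2}|H^{\Ncal00}|)^{-1}$, the first two summands combine into $\lapsb\,\zeta^{-2}|H^{\Ncal00}|_{1}/(1+\zeta^{-2}|H^{\Ncal00}|)$ and the third gives $\lapsb|H|_{1}$, as required.

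\textbf{Main obstacle.} The bulk of the proof is structural rather than deep: the only subtlety is to keep track of which geometric scalars are annihilated by the commutation field in question, and to exploit the light-bending denominator at the right moment. The key non-trivial gain over Lemma~\ref{lem1-26-july-2025} is that the factor $\lapsb^{2}/\lapsb_{\eta}^{2}$ is \emph{not} simply bounded by a universal constant but by $(1+\zeta^{-2}|H^{\Ncal00}|)^{-1}$; this is what allows the $\zeta^{-2}|H^{\Ncal00}|$ coming from the product rule on $\zeta^{-2}H^{\Ncal00}$ to be absorbed into the first term of the estimate rather than producing a loss near the light cone where $\zeta\to 0$.
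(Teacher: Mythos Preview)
Your proof is correct and follows essentially the same route as the paper's: both differentiate the identity \eqref{eq4-23-july-2025}, exploit that $\lapsb_{\eta}=J\zeta^{-1}$ is annihilated by the relevant commutation field (equal to $1$ in $\Mcal^{\Hcal}$, radial hence killed by $\Omega_{ab}$ in $\Mcal^{\ME}$), and then invoke Lemma~\ref{lem1-07-aout-2025} to convert the denominator $1-\zeta^{-2}H^{\Ncal00}-R[H]$ into $(1+\zeta^{-2}|H^{\Ncal00}|)$. The only cosmetic difference is that you pass through $X\lapsb=\tfrac{1}{2}\lapsb^{3}X(-\lapsb^{-2})$ while the paper differentiates $\lapsb=J\zeta^{-1}(1-\zeta^{-2}H^{\Ncal00}-R[H])^{-1/2}$ directly; the resulting expressions are identical.
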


\begin{proof}
\bse
These are also based on \eqref{eq6-03-aout-2025} and \eqref{eq11-03-aout-2025}. First of all, we remark that all the estimates are much easier out of the region $\{3t/4\leq r\leq r^{\E}(s)\}$, because $\zeta\geq \sqrt{7}/4>0$ and $|1+\zeta^{-2}|H^{\Ncal00}||\lesssim 1$. In this case, recall that
$$
\lapsb = (- \gb^{00})^{-1/2} = J\Big(\zeta^2 - \big(H^{00} - 2(x^a/r)\delb_rTH^{a0} + (\delb_rT)^2(x^ax^b/r^2)H^{ab}\big)\Big)^{-1/2}.
$$
Then in $\Mcal^{\Hcal}_{[s_0,s_1]}$, $\zeta = J =  s/t$, $(x^a/r)\delb_rT = x^a/t$. By the homogeneity of these functions and the fact that $|L_a(s/t)|\lesssim (s/t)$, we obtain
$$
|L_a(\lapsb)|\lesssim |H|_1,
$$ 
which fits \eqref{eq3-08-aout-2025}. When in $\Mcal^{\Ecal}_{[s_0,s_1]}$, we remark that $\Omega_{ab}(\zeta) = \Omega_{ab}(J) = \Omega_{ab}(\delb_rT) = \delb_rT = 0$. Then \eqref{eq4-08-aout-2025} follows.

%------------------------------------------------------------------------

Then we treat the region $\{3t/4\leq r\leq r^{\E}(s)\}\cap \Mcal_{[s_0,s_1]}$. For any vector field $X$, we have 
\begin{equation}
XH^{\alpha\beta} = Xg^{\alpha\beta} 
= -g^{\alpha\mu}\big(XH_{\mu\nu}\big)g^{\nu\beta}.
\end{equation}
Also, provided that $|H|$ is sufficiently small, $g^{\alpha\beta}$ are uniformly bounded. Then we find 
\begin{equation}\label{eq4-26-july-2025}
|X g^{\alpha\beta}| = |XH^{\alpha\beta}|\lesssim |XH|:=\max_{\alpha\beta}|XH_{\alpha\beta}|.
\end{equation}
We then recall \eqref{eq1-08-aout-2025}. In fact in $\Mcal_{[s_0,s_1]}$, we have 
\begin{equation}\label{eq2-08-aout-2025}
\aligned
- \lapsb^{-2} = -J^{-2}\zeta^2 + J^{-2}\big(H^{\Ncal00} + \zeta^2R[H]\big)
\endaligned
\end{equation}
with
$$
R[H] = \frac{2H(\diff t- \diff r,\diff r)}{1+\delb_rT} + \frac{1- \delb_r T}{1+\delb_rT}H(\diff r,\diff r).
$$
That is,
\begin{equation}
\lapsb = J\zeta^{-1}\big(1- \zeta^{-2}H^{\Ncal00} - R[H]\big)^{-1/2}.
\end{equation}
For any vector field $X$,
$$
\aligned
X\lapsb & =\frac{X(J\zeta^{-1})}{(1- \zeta^{-2}H^{\Ncal00}-R[H])^{1/2}} 
- \frac{-J\zeta^{-1}X(\zeta^{-2}H^{\Ncal00}+R[H])}{2(1- \zeta^{-2}H^{\Ncal00}-R[H])^{3/2}}
\\
& = \frac{X(J\zeta^{-1})}{(1- \zeta^{-2}H^{\Ncal00}-R[H])^{1/2}} 
+ \frac{1}{(- \gb^{00})^{1/2}}\frac{X(\zeta^{-2}H^{\Ncal00}+R[H])}{2(1- \zeta^{-2}H^{\Ncal00}-R[H])}
\\
& = \frac{X(J\zeta^{-1})}{(1- \zeta^{-2}H^{\Ncal00}-R[H])^{1/2}} 
+ \frac{\lapsb}{2}\frac{X(\zeta^{-2}H^{\Ncal00}+R[H])}{1- \zeta^{-2}H^{\Ncal00}-R[H]}.
\endaligned
$$
When we are in $\Mcal^{\Hcal}_{[s_0,s_1]}$, thus $J\zeta^{-1} \equiv 1$. Furthermore, for $X=L_a$,
$$
\big|X(\zeta^{-2}H^{\Ncal00})\big| = \big|X((s/t)^{-2}H^{\Ncal00})\big|\lesssim (s/t)^{-2}|H^{\Ncal00}|_1\lesssim \zeta^{-2}|H^{\Ncal00}|_1.
$$
Thus thanks to Lemma~\ref{lem1-07-aout-2025} and the homogeneity property of the coefficients in $R[H]$, we obtain \eqref{eq3-08-aout-2025}.

When we are in $\Mcal^{\Ecal}_{[s_0,s_1]}$, $\Omega_{ab}(J\zeta^{-1}) = 0$ because the function is radial. Then
$$
\Omega_{ab}\lapsb 
= \frac{\lapsb}{2}\frac{\zeta^{-2}\Omega_{ab}H^{\Ncal00} + \Omega_{ab}(R[H])}{1- \zeta^{-2}H^{\Ncal00}-R[H]} .
$$
This leads us to \eqref{eq4-08-aout-2025}.
\ese
\end{proof}

\subsection{ Perturbations associated with geometric objects}
\label{section===73}

{ 

To proceed with the analysis of the curved background metric, we need to control the ``difference'' (for diverse geometric objects) between the curved case and the flat case and bound it in terms of the amplitude of the deviation of the curved metric from the Minkowski metric. 

\begin{proposition}
\label{prop1-27-june-2025}
Assume that the uniform spacelike \eqref{eq-USA-condition} holds for a sufficiently small $\eps_s$. Then in the region $\{3t/4\leq r\leq r^{\Ecal}(s)\}\cap\Mcal_{[s_0,s_1]}$, one has 
\begin{equation}
\sigmab^{ab} = Q^{ab} + M_c^aQ^{cb} 
\end{equation}
with
\begin{equation}
Q^{ab} = \sigmab_{\eta}^{ab} 
- \frac{|H^{\Ncal00}|}{\zeta^2(\zeta^2+|H^{\Ncal00}|)}\frac{x^ax^b}{r^2},
\end{equation}
\begin{equation}
|M_c^a|\lesssim \zetab^{-2}|H|, \qquad |M_c^aQ^{cb}|\lesssim \zetab^{-2}|H|.
\end{equation}
\end{proposition}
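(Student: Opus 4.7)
The plan is to start from the Schur-type identity $\sigmab^{ab} = \lapsb^2\gb^{0a}\gb^{0b} + \gb^{ab}$ provided by Lemma~\ref{lem1-12-june-2025}, substitute the explicit expressions for $\gb^{0a}$ and $\gb^{ab}$ from \eqref{eq11-03-aout-2025} together with $\lapsb^2 = J^2/D$, where $D := \zetab^2 - \zeta^2 R[H]$ coming from \eqref{eq4-23-july-2025}, and then identify $Q^{ab}$ as the principal part by exploiting the algebraic identity $(\delb_rT)^2 = 1 - \zeta^2$. All remaining contributions---the $R[H]$ correction in $\lapsb^2$, the cross terms involving $E^a := H^{0a} - \delb_rT(x^b/r)H^{ab}$ (which satisfies $|E^a|\lesssim|H|$), and the $H^{ab}$ piece of $\gb^{ab}$---will be grouped into a single remainder $\Rcal^{ab}$ of size $\lesssim \zetab^{-2}|H|$; here the factor $\zetab^{-2}$ is produced by $1/D \lesssim \zetab^{-2}$, which is guaranteed by Lemma~\ref{lem1-07-aout-2025}.

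Concretely, I would first derive the exact formula
\begin{equation*}
\sigmab^{ab} = \frac{(\delb_rT)^2}{D}\frac{x^ax^b}{r^2} + \delta^{ab} + \frac{1}{D}\Big(-\delb_rT\,\tfrac{x^a}{r}E^b - \delb_rT\,\tfrac{x^b}{r}E^a + E^aE^b\Big) + H^{ab},
\end{equation*}
and then use the decomposition $\tfrac{1}{D} = \tfrac{1}{\zetab^2} + \tfrac{\zeta^2 R[H]}{D\,\zetab^2}$ together with the split $\tfrac{(\delb_rT)^2}{\zetab^2} = \tfrac{(\delb_rT)^2 - |H^{\Ncal 00}|}{\zetab^2} + \tfrac{|H^{\Ncal 00}|}{\zetab^2}$. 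The first summand is precisely the radial coefficient of $Q^{ab}$, since the identity $(\delb_rT)^2 + \zeta^2 = 1$ rewrites as $\tfrac{(\delb_rT)^2-|H^{\Ncal 00}|}{\zetab^2} = \tfrac{(\delb_rT)^2}{\zeta^2} - \tfrac{|H^{\Ncal 00}|}{\zeta^2\,\zetab^2}$. This produces the clean decomposition
\begin{equation*}
\sigmab^{ab} = Q^{ab} + \frac{|H^{\Ncal 00}|}{\zetab^2}\,\frac{x^ax^b}{r^2} + \Rcal^{ab}, \qquad |\Rcal^{ab}| \lesssim \zetab^{-2}|H|,
\end{equation*}
where the termwise control of $\Rcal^{ab}$ uses $|R[H]|\lesssim|H|$, $|E^a|\lesssim|H|$, $\zeta^2 \leq \zetab^2$, and $\zetab\lesssim 1$.

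To recast the difference as $M_c^a Q^{cb}$, I would then invert $Q^{ab}$ explicitly. Writing $Q^{ab} = \alpha\, e^a e^b + \delta^{ab}$ with $e^a := x^a/r$ and $\alpha := ((\delb_rT)^2 - |H^{\Ncal 00}|)/\zetab^2$, the Sherman--Morrison identity yields
\begin{equation*}
(Q^{-1})_{ab} = \delta_{ab} - \big((\delb_rT)^2 - |H^{\Ncal 00}|\big)\,\frac{x^ax^b}{r^2},
\end{equation*}
so that $|Q^{-1}|\lesssim 1$ uniformly in the region. Moreover, the identity $\alpha + 1 = 1/\zetab^2$ shows that $e^a$ is an eigenvector of $Q$ with eigenvalue $\zetab^{-2}$, so that the rank-one piece $\tfrac{|H^{\Ncal 00}|}{\zetab^2}\,e^ae^b$ coincides with $|H^{\Ncal 00}|\,e^a e_c\, Q^{cb}$. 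The choice $M_c^a := (\sigmab^{ab} - Q^{ab})(Q^{-1})_{bc}$ then realizes the factorization tautologically, and both $|M_c^a|\lesssim|\sigmab^{ab}-Q^{ab}|\cdot|Q^{-1}|\lesssim\zetab^{-2}|H|$ and $|M_c^a Q^{cb}| = |\sigmab^{ab} - Q^{ab}|\lesssim \zetab^{-2}|H|$ follow directly.

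The main obstacle is the bookkeeping in the transition region where $\zeta$ and $|H^{\Ncal 00}|^{1/2}$ are of comparable size: every perturbative term has to be bounded by $\zetab^{-2}|H|$ rather than the weaker $\zeta^{-2}|H|$. This hinges on Lemma~\ref{lem1-07-aout-2025}, which ensures $D \simeq \zetab^2$ despite the $\zeta^2 R[H]$ correction; once this is secured, the reduction to termwise estimation is straightforward, and the Sherman--Morrison inversion absorbs the only potentially $O(1)$ contribution (namely the radial rank-one piece $\tfrac{|H^{\Ncal 00}|}{\zetab^2}\tfrac{x^ax^b}{r^2}$) into the factor $Q^{cb}$, precisely because $e^a$ diagonalizes $Q$ with the small eigenvalue $\zetab^{-2}$.
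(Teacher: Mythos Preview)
Your approach is correct and genuinely different from the paper's. The paper works in a local sphere-adapted frame: it introduces orthogonal matrices $A_i$ diagonalizing the flat part so that $A_i^{\mathrm T}\gb A_i = F + R_i$ with $F=\operatorname{diag}(\zetab^2,1,1)$, then expands $(F+R_i)^{-1}$ as a Neumann series and invokes a separate block-matrix lemma (Lemma~\ref{lem1-15-june-2025}) to control the tail; the formula for $Q$ drops out as $A_iF^{-1}A_i^{-1}$, and a compactness argument over the sphere patches everything together. Your route bypasses all of this by working directly in Cartesian coordinates from the Schur identity $\sigmab^{ab}=\lapsb^2\gb^{0a}\gb^{0b}+\gb^{ab}$, which reduces the problem to elementary algebra with the single scalar $D=\zetab^2-\zeta^2R[H]$; the Sherman--Morrison inversion of the rank-one perturbation $Q$ then delivers both $M$ and its bound in one stroke. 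Your argument is shorter and avoids the local-frame machinery entirely; the paper's frame-based setup, on the other hand, is the one already developed for the proof of Proposition~\ref{prop1-14-june-2025} and so recycles that infrastructure.
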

The proof is postponed to Section~\ref{subsec1-29-sept-2025}. Then we turn our attention to the shift vector.

\begin{claim}
\label{clm-1-29-june-2025}
Assume that the uniform spacelike condition \eqref{eq-USA-condition} holds for a sufficiently small $\eps_s$. Then in the region $\{3t/4\leq r\leq r^{\Ecal}(s)\}\cap\Mcal_{[s_0,s_1]}$, one has 
\begin{equation}\label{eq2-27-june-2025}
\betab^b =  \frac{\zeta^2}{\zeta^2-H^{\Ncal00}}\betab_{\eta}^b 
+ \Hb_{0a}\sigmab^{ab} 
 + \bar{\eta}_{0a}\,M_c^aQ^{cb} 
\end{equation}
where, by recalling \eqref{eq1-27-june-2025}, $\betab_{\eta}^b = -(x^b/r)\delb_rTJ\zeta^{-2}$. In particular, it follows that 
\begin{equation}\label{eq8-31-july-2025}
|\betab^a|\lesssim \zetab^{-2}(J + |H|). 
\end{equation}
\end{claim}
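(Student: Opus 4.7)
The plan is to derive the identity \eqref{eq2-27-june-2025} by a direct algebraic computation, starting from the definition $\betab^b = \gb_{0a}\sigmab^{ab}$ and substituting the decomposition obtained in Proposition~\ref{prop1-27-june-2025}, then reading the bound \eqref{eq8-31-july-2025} off the identity.

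First, I would split $\gb_{0a} = \bar{\eta}_{0a} + \Hb_{0a}$ and apply Proposition~\ref{prop1-27-june-2025}, obtaining
\[
\betab^b \;=\; \bar{\eta}_{0a}Q^{ab} \;+\; \bar{\eta}_{0a}M_c^{a}Q^{cb} \;+\; \Hb_{0a}\sigmab^{ab}.
\]
The third and second summands are already the stated ``error'' terms, so the task reduces to showing $\bar{\eta}_{0a}Q^{ab} = \frac{\zeta^2}{\zeta^2-H^{\Ncal 00}}\betab_{\eta}^{b}$. For this I would use the explicit formulas from Lemma~\ref{lem1-13-june-2025}, namely $\bar{\eta}_{0a} = -(x^a/r)J\delb_rT$ and $\sigmab_{\eta}^{ab} = \zeta^{-2}(x^ax^b/r^2)(\delb_rT)^2 + \delta^{ab}$, together with $\betab_{\eta}^{b} = -(x^b/r)J\delb_rT\,\zeta^{-2}$ from \eqref{eq1-27-june-2025}. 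A direct contraction gives $\bar{\eta}_{0a}\sigmab_{\eta}^{ab} = \betab_{\eta}^{b}$, while the contraction against the additional radial tensor in $Q^{ab}$ yields
\[
\bar{\eta}_{0a}\Big(-\frac{|H^{\Ncal 00}|}{\zeta^2(\zeta^2+|H^{\Ncal 00}|)}\frac{x^ax^b}{r^2}\Big) \;=\; -\frac{|H^{\Ncal 00}|}{\zeta^2+|H^{\Ncal 00}|}\,\betab_{\eta}^{b},
\]
after using $\sum_a (x^a/r)^2 = 1$ and the definition of $\betab_{\eta}^{b}$. Summing and factoring,
\[
\bar{\eta}_{0a}Q^{ab} \;=\; \Big(1 - \tfrac{|H^{\Ncal 00}|}{\zeta^2+|H^{\Ncal 00}|}\Big)\betab_{\eta}^{b} \;=\; \tfrac{\zeta^2}{\zeta^2+|H^{\Ncal 00}|}\,\betab_{\eta}^{b}.
\]
The conclusion follows once I observe that under the light-bending assumption \eqref{eq1-14-june-2025} we have $H^{\Ncal 00}<0$ in the relevant region, hence $\zeta^2+|H^{\Ncal 00}| = \zeta^2 - H^{\Ncal 00}$.

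For the pointwise bound \eqref{eq8-31-july-2025}, I would estimate the three summands separately and use $\zetab^{2}=\zeta^2+|H^{\Ncal 00}|$. The main term satisfies
\[
\Big|\tfrac{\zeta^2}{\zetab^2}\betab_{\eta}^{b}\Big| \;\lesssim\; \tfrac{\zeta^2}{\zetab^2}\,J\zeta^{-2} \;=\; J\zetab^{-2},
\]
using $|\betab_{\eta}^{b}|\lesssim J\zeta^{-2}$. For the second term, Proposition~\ref{prop1-27-june-2025} gives $|M_c^{a}Q^{cb}|\lesssim \zetab^{-2}|H|$ and $|\bar{\eta}_{0a}|\lesssim J$, so this contribution is bounded by $J\zetab^{-2}|H|\lesssim J\zetab^{-2}$ after absorbing the small factor $|H|$. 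The third term is controlled by combining \eqref{eq3-14-july-2025}, which yields $|\Hb_{0a}|\lesssim J|H|$, with the bound $|\sigmab^{ab}|\lesssim \zetab^{-2}$ from \eqref{eq2-16-june-2025}; this yields $\lesssim J|H|\zetab^{-2}$ as well. Collecting these bounds produces the claimed estimate $|\betab^b|\lesssim \zetab^{-2}(J+|H|)$.

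The proof is essentially a direct substitution, so I do not anticipate any serious obstacle; the only point requiring a little care is the identification $|H^{\Ncal 00}| = -H^{\Ncal 00}$, which is exactly where the light-bending hypothesis \eqref{eq1-14-june-2025} enters and is the reason the region $\{3t/4\le r\le r^{\Ecal}(s)\}$ is isolated in the statement. Outside that region the estimate can already be obtained from Lemma~\ref{lem1a-02-march-2025}; inside the region, however, one genuinely needs the refined splitting of $\sigmab^{ab}$ supplied by Proposition~\ref{prop1-27-june-2025}, otherwise the $\zeta^{-2}$ factors from the flat shift vector would not combine with $|H^{\Ncal 00}|$ to produce the favorable weight $\zetab^{-2}$.
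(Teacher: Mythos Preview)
Your proof is correct and follows essentially the same approach as the paper's: both split $\gb_{0a}=\bar\eta_{0a}+\Hb_{0a}$, invoke Proposition~\ref{prop1-27-june-2025}, and contract $\bar\eta_{0a}$ against the radial tensor $\frac{x^ax^b}{r^2}$ to extract the factor $\frac{\zeta^2}{\zeta^2+|H^{\Ncal00}|}$ in front of $\betab_\eta^b$. The only cosmetic difference is that the paper writes the computation as $\betab^b-\betab_\eta^b = \Hb_{0a}\sigmab^{ab}+\bar\eta_{0a}(\sigmab^{ab}-\sigmab_\eta^{ab})$ and then substitutes $\sigmab^{ab}-\sigmab_\eta^{ab}$, whereas you substitute $\sigmab^{ab}=Q^{ab}+M_c^aQ^{cb}$ first and compute $\bar\eta_{0a}Q^{ab}$; the algebra is identical.
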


\begin{proof} It is direct to compute  
$$
\betab^b- \betab_{\eta}^b = \gb_{0a}\sigmab^{ab} - \bar{\eta}_{0a}\sigmab_{\eta}^{ab}
= \Hb_{0a}\sigmab^{ab} + \bar{\eta}_{0a}(\sigmab^{ab} - \sigmab_{\eta}^{ab}).
$$
For the second term, we apply Proposition~\ref{prop1-27-june-2025} and $\bar{\eta}_{0a} = -J(x^a/r)\delb_rT$,
and obtain 
$$
\aligned
\bar{\eta}_{0a}(\sigmab^{ab} - \sigmab_{\eta}^{ab})
& =  J\frac{x^a}{r}\delb_rT\frac{|H^{\Ncal00}|}{\zeta^2(\zeta^2+|H^{\Ncal00}|)}\frac{x^ax^b}{r^2}
+\bar{\eta}_{0a}M_c^aQ^{cb}
\\
& = \frac{|H^{\Ncal00}|}{\zeta^2(\zeta^2+|H^{\Ncal00}|)}J\delb_rT\frac{x^b}{r}
+\bar{\eta}_{0a}M_c^aQ^{cb}
\\
& = - \frac{|H^{\Ncal00}|}{\zeta^2+|H^{\Ncal00}|}\betab_{\eta}^b + \bar{\eta}_{0a}M_c^aQ^{cb}.
\endaligned
$$
The desired result follows, provided that $H^{\Ncal00}<0$, which is assumed.
\end{proof}

\begin{claim}
Assume that the uniform spacelike condition \eqref{eq-USA-condition} holds for a sufficiently small $\eps_s$. Then in the region $\{3t/4\leq r\leq r^{\Ecal}(s)\}\cap\Mcal_{[s_0,s_1]}$, one has 
\begin{equation}
\big|Q^{ab}\delb_b\big|_{\sigmab}\lesssim \zetab^{-1}\lesssim \zeta^{-1}.
\end{equation}
\end{claim}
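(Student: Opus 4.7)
The plan is to exhibit $Q^{ab}$ as a rank-one perturbation of the identity, then expand $|Q^{ab}\delb_b|^2_{\sigmab}$ (summed over the free index $a$) using this structure, and finally reduce the whole problem to a single scalar estimate $g(\delb_r,\delb_r)\asymp\zetab^2$ paralleling the lapse computation in~\eqref{eq4-23-july-2025}.

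First, using $(\delb_rT)^2 = 1 - \zeta^2$ and $\zetab^2 = \zeta^2 + |H^{\Ncal 00}|$, a direct algebraic manipulation of the two coefficients multiplying $x^ax^b/r^2$ in the definition of $Q^{ab}$ gives
\begin{equation*}
Q^{ab} \;=\; \delta^{ab} + \Bigl(\frac{1}{\zetab^2} - 1\Bigr)\frac{x^ax^b}{r^2},
\qquad Q^{ab}\delb_b \;=\; \delb_a + \bigl(\zetab^{-2}-1\bigr)(x^a/r)\,\delb_r,
\end{equation*}
where $\delb_r := (x^b/r)\delb_b$ is tangent to $\Mcal_s$. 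Exploiting the rank-one structure, one computes $\sum_{a}Q^{ab}Q^{ac} = \delta^{bc} + (\zetab^{-4}-1)\,x^bx^c/r^2$ and hence
\begin{equation*}
\sum_a \bigl|Q^{ab}\delb_b\bigr|^2_{\sigmab}
\;=\; \mathrm{tr}(\gb_{ab}) + \bigl(\zetab^{-4}-1\bigr)\,g(\delb_r,\delb_r).
\end{equation*}
The trace is bounded by~\eqref{eq4-14-july-2025}, so everything reduces to the scalar bound $g(\delb_r,\delb_r)\lesssim \zetab^2$.

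To establish this bound, write $g(\delb_r,\delb_r) = (x^ax^b/r^2)\gb_{ab}$, plug in the expression~\eqref{eq10-03-aout-2025} for $\gb_{ab}$, and note that $1-\delb_rT\leq\zeta^2$, so that all cross-terms with a factor of $(\delb_rT-1)$ contribute only $O(\zeta^2|H|)$. One arrives at
\begin{equation*}
g(\delb_r,\delb_r) \;=\; \zeta^2 + \bigl(H_{00} + 2H_{r0} + H_{rr}\bigr) + O(\zeta^2|H|),
\end{equation*}
where $H_{r0}:=(x^a/r)H_{a0}$ and $H_{rr}:=(x^ax^b/r^2)H_{ab}$. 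Applying Lemma~\ref{lem1-16-june-2025} to the upper-index combination $H^{\Ncal00} = H^{00} - 2(x^a/r)H^{0a} + (x^ax^b/r^2)H^{ab}$ and recognizing that inversion of the metric flips the sign on this contraction (since $\diff t-\diff r$ is null in the flat case), one obtains $H_{00}+2H_{r0}+H_{rr} = -H^{\Ncal00} + O(|H|^2)$. The light-bending condition $H^{\Ncal00}<0$ from~\eqref{eq-USA-condition} then turns this into $+|H^{\Ncal00}|$, and the smallness $|H|\lesssim\eps_s\zeta$ absorbs the error terms, yielding $g(\delb_r,\delb_r) = \zetab^2\bigl(1+O(\eps_s)\bigr)$. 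Substituting back gives $\sum_a|Q^{ab}\delb_b|^2_{\sigmab}\lesssim 1 + \zetab^{-2}\lesssim \zetab^{-2}$, so for each fixed $a$ one has $|Q^{ab}\delb_b|_{\sigmab}\lesssim\zetab^{-1}$; the second inequality $\zetab^{-1}\leq\zeta^{-1}$ is immediate from $\zetab\geq\zeta$.

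The main obstacle is the scalar cancellation in the last step: showing that the flat value $\zeta^2$ of $g(\delb_r,\delb_r)$ is promoted \emph{exactly} to $\zetab^2$ by the metric perturbation. This requires both the precise identity $H_{00}+2H_{r0}+H_{rr} = -H^{\Ncal00} + O(|H|^2)$ (a consequence of the null character of $\diff t-\diff r$ in the flat background combined with Lemma~\ref{lem1-16-june-2025}) and the correct sign from the light-bending condition; it is precisely this cancellation that explains \emph{why} the denominator $\zeta^2+|H^{\Ncal00}|$ chosen in the definition of $Q^{ab}$ is the right one to produce a $\zetab^{-1}$ bound rather than the worse $\zetab^{-2}$ obtained from the crude estimate $|Q^{ab}|\lesssim\zetab^{-2}$.
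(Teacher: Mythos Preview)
Your proof is correct and follows essentially the same route as the paper's. Both proofs reduce to the same algebraic simplification $Q^{ab}\delb_b = \delb_a + (\zetab^{-2}-1)(x^a/r)\delb_r$ and the same key scalar estimate $g(\delb_r,\delb_r)\simeq\zetab^2$; the paper obtains the latter by citing Lemma~\ref{lem1-14-june-2025} whereas you re-derive it, and the paper bounds the three pieces of $Q^{ab}\delb_b$ separately by the triangle inequality whereas you sum over $a$ to get the closed form $\mathrm{tr}(\gb_{ab}) + (\zetab^{-4}-1)g(\delb_r,\delb_r)$, but these are purely presentational differences.
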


\begin{proof} 
\bse
A direct calculation gives us 
$$
\aligned
Q^{ab}\delb_b 
& =  \frac{x^a}{r}\zeta^{-2}\Big((\delb_rT)^2 - \frac{|H^{\Ncal00}|}{\zeta^2+|H^{\Ncal00}|}\Big)\delb_r + \delb_a
\\
& = - \frac{x^a}{r}\zeta^{-2}\big(1-(\delb_rT)^2\big)\delb_r + \frac{x^a}{r}\zeta^{-2}\Big(1- \frac{|H^{\Ncal00}|}{\zeta^2+|H^{\Ncal00}|}\Big)\delb_r + \delb_a
\\
& = - \frac{x^a}{r}\delb_r + \frac{x^a}{r}\frac{1}{\zeta^2+|H^{\Ncal00}|}\delb_r + \delb_a.
\endaligned
$$
Recalling Lemma~\ref{lem1-14-june-2025}, we have 
\begin{equation}
\big| |\delb_r|_{\sigmab}^2 - (\zeta^2 + |H^{\Ncal00}|)\big|\lesssim \eps_s\zeta^2, 
\end{equation}
which leads us to
\begin{equation}
|\delb_r|_{\sigmab}\lesssim \zeta + |H^{\Ncal00}|^{1/2}.
\end{equation}
On the other hand, we have 
\begin{equation}\label{eq2-30-june-2025}
|\delb_a|_{\sigmab}^2 = |\gb_{aa}| = |\bar{\eta}_{aa}| + |\Hb_{aa}|\lesssim 1
\end{equation}
and we arrive at the desired result.
\ese
\end{proof}
}

\begin{claim}\label{clm1-12-aout-2025}
Assume that the uniform spacelike condition \eqref{eq-USA-condition} holds for a sufficiently small $\eps_s$. Then 
\begin{equation}\label{eq3-12-aout-2025}
\big|\sigmab^{ab} - \sigmab_{\eta}^{ab}\big|\lesssim \zeta^{-2}\zetab^{-2}|H|.
\end{equation}
\end{claim}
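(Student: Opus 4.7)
The plan is to apply the second resolvent identity (for $3\times 3$ matrices) to $\sigmab$ and $\sigmab_\eta$. Writing $\sigmab_{ab}=\sigmab_{\eta,ab}+\Hb_{ab}$ and using that both matrices are invertible (which is ensured by Proposition~\ref{prop1-14-june-2025} together with the uniform spacelike condition \eqref{eq-USA-condition}, via the lower bound on $\sqrt{\det(\gb_{ab})}$ from Claim~\ref{lem1-02-march-2025}), the identity $(A+B)^{-1}-A^{-1}=-(A+B)^{-1}BA^{-1}$ yields
\[
\sigmab^{ab}-\sigmab_\eta^{ab} \;=\; -\,\sigmab^{ac}\,\Hb_{cd}\,\sigmab_\eta^{db}.
\]
With the claim reduced to an algebraic product, the strategy is simply to bound the three factors separately by previously-established estimates.

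For the factor $\sigmab^{ac}$, the bound $|\sigmab^{ac}|\lesssim \zetab^{-2}$ is exactly \eqref{eq2-16-june-2025} in Claim~\ref{cla1-16-june-2025}. For the flat factor $\sigmab_\eta^{db}$, the explicit formula \eqref{eq1-17-mai-2025} together with \eqref{eq7-13-june-2025} gives $|\sigmab_\eta^{db}|\lesssim \zeta^{-2}$. Finally, the spatial--spatial component of the perturbation in the adapted frame satisfies $|\Hb_{cd}|\lesssim |H|$ by \eqref{eq3-14-july-2025} (this is the most favourable of the three estimates there, since no Jacobian factor $J$ enters when both indices are spatial). Combining these three estimates yields
\[
|\sigmab^{ab}-\sigmab_\eta^{ab}| \;\lesssim\; \zetab^{-2}\cdot |H|\cdot \zeta^{-2} \;=\; \zeta^{-2}\zetab^{-2}|H|,
\]
which is \eqref{eq3-12-aout-2025}.

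There is no serious obstacle in this argument: the content of the claim has in fact already been distilled into Claim~\ref{cla1-16-june-2025} (and ultimately into Proposition~\ref{prop1-14-june-2025}), and the statement here is essentially a bookkeeping corollary. One should only verify consistency in the two regimes: in $\{3t/4\leq r\leq r^{\Ecal}(s)\}\cap \Mcal_{[s_0,s_1]}$ the estimate is genuinely $\zeta^{-2}\zetab^{-2}|H|$ (with $\zeta$ and $\zetab$ possibly degenerating); outside this region $\zeta$ is bounded below by a universal constant, so both weights are $\sim 1$ and the bound reduces to the obvious $|\sigmab^{ab}-\sigmab_\eta^{ab}|\lesssim |H|$.
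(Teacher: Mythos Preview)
Your proof is correct and is in fact more direct than the paper's. The paper appeals to Proposition~\ref{prop1-27-june-2025}, which provides the structural decomposition
\[
\sigmab^{ab}-\sigmab_\eta^{ab} = -\,\frac{|H^{\Ncal00}|}{\zeta^2(\zeta^2+|H^{\Ncal00}|)}\,\frac{x^ax^b}{r^2} + M_c^aQ^{cb},
\]
and then reads off the bound from the explicit sizes of the two pieces. You instead use the resolvent identity $\sigmab^{ab}-\sigmab_\eta^{ab}=-\sigmab^{ac}\Hb_{cd}\sigmab_\eta^{db}$ (which in fact already appears in the paper as \eqref{eq4'-21-july-2025}) and bound each factor separately via Claim~\ref{cla1-16-june-2025}, \eqref{eq7-13-june-2025}, and \eqref{eq3-14-july-2025}. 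Your route is cleaner for this particular claim, since it avoids the rather involved proof of Proposition~\ref{prop1-27-june-2025}; the paper's detour is justified only because that proposition is needed elsewhere (e.g.\ for the shift vector in Claim~\ref{clm-1-29-june-2025}), not because the finer decomposition is required here. Both arguments ultimately rest on the same hard input, namely Proposition~\ref{prop1-14-june-2025}.
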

\begin{proof}
Thanks to Proposition~\ref{prop1-27-june-2025}, in $\{3t/4\leq r\leq r^{\Ecal}(s)\}\cap\Mcal_{[s_0,s_1]}$,
$$
\sigmab^{ab} - \sigmab_{\eta}^{ab} = - \frac{|H^{\Ncal00}|}{\zeta^2(\zeta^2+|H^{\Ncal00}|)}\frac{x^ax^b}{r^2} + M_c^aQ^{cb}.
$$
Then \eqref{eq3-12-aout-2025} follows. Out of the region $\{3t/4\leq r\leq r^{\Ecal}(s)\}$, the estimate is trivial because $\zetab\geq \zeta\geq \sqrt{7}/4$.
\end{proof}

%=================================================================================

\section{Energy functionals and norms of interest}
\label{section=N5}

\subsection{ Weighted energy identity for Dirac operator}
\label{section===81}

{ 

It is convenient to introduce the notation
\be
\|u\|_{L^1_{\sigmab}(\Mcal_s)} := \int_{\Mcal_s}|u|\mathrm{Vol}_{\sigmab},
\quad
\|u\|_{L^1(\Mcal_s)} := \int_{\Mcal_s}|u|\diff x.
\ee

\begin{proposition}\label{prop1-04-april-2025}
Suppose that $g_{\alpha\beta} = \eta_{\alpha\beta} + H_{\alpha\beta}$ is a sufficiently regular metric defined in $\mathcal{M}_{[s_0,s_1]}$. Then any $C^1$ solution $\Psi$ to the Dirac equation
\be
\opDirac\Psi + \mathrm{i}M\Psi = \Phi, 
\ee
defined in the domain $\mathcal{M}_{[s_0,s_1]}$ and vanishing sufficiently fast at the spatial infinity, satisfies the energy identity  
\begin{equation}\label{eq1a-05-april-2025}
\aligned
& \int_{\mathcal{M}_{s_1}} w\la \Psi,\vec{n} \cdot\Psi\ra_{\ourD} \mathrm{Vol}_{\sigmab}
+\int_{s_0}^{s_1}\int_{\Mcal_s}
\la \Psi,\mathrm{grad}(w)\cdot\Psi\ra_{\ourD}\,\lapsb \, \mathrm{Vol}_{\sigmab}\diff s
\\
& =  \int_{\mathcal{M}_{s_0}} w\la \Psi,\vec{n} \cdot\Psi\ra_{\ourD}\mathrm{Vol}_{\sigmab}
- 2\int_{s_0}^{s_1}\int_{\Mcal_s}w\Re\big(\la \Phi, \Psi \ra_{\ourD} \big)\,\lapsb \, \mathrm{Vol}_{\sigmab}\diff s. 
\endaligned
\end{equation}
Here, $\vec{n}$ is the future oriented, unit normal vector to $\Mcal_s$, and $w$ is a sufficiently regular scalar weight. Furthermore, one has 
\begin{equation}\label{eq1-10-oct-2025}
\aligned
& \int_{\mathcal{M}_{s_1}} w\la t^{-1}\Psi,\vec{n} \cdot t^{-1}\Psi\ra_{\ourD} \mathrm{Vol}_{\sigmab}
\\
& \quad +\int_{s_0}^{s_1}\int_{\Mcal_s}
\big(\la t^{-1}\Psi,\mathrm{grad}(w)\cdot t^{-1}\Psi\ra_{\ourD} 
+ t^{-1}g^{0\mu}\la t^{-1}\Psi,\del_{\mu}\cdot t^{-1}\Psi \ra_{\ourD}\big)
\,\lapsb \, \mathrm{Vol}_{\sigmab}\diff s
\\
& =  \int_{\mathcal{M}_{s_0}} w\la t^{-1}\Psi,\vec{n} \cdot t^{-1}\Psi\ra_{\ourD}\mathrm{Vol}_{\sigmab}
- 2\int_{s_0}^{s_1}\int_{\Mcal_s}t^{-1}w\Re\big(\la \Phi, \Psi \ra_{\ourD} \big)\,\lapsb \, \mathrm{Vol}_{\sigmab}\diff s. 
\endaligned
\end{equation}
\end{proposition}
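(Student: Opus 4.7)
The plan is to apply the divergence identity of Proposition~\ref{propo-JDK02} to a suitably weighted vector field and then invoke the Lorentzian Stokes theorem on the Euclidean--hyperboloidal slab $\Mcal_{[s_0,s_1]}$, with volume decomposition supplied by Lemma~\ref{lem1-25-mai-2025}.

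For \eqref{eq1a-05-april-2025}, I would work with the vector field $w\,V[\Psi]$, where $V[\Psi]$ is defined in \eqref{equa-jd394}. The Leibniz rule for the divergence gives
\[
\mathrm{div}(w\,V[\Psi]) \;=\; w\,\mathrm{div}(V[\Psi]) \,+\, V[\Psi](w).
\]
By Proposition~\ref{propo-JDK02} the first summand equals $2w\,\Re\la\opDirac\Psi,\Psi\ra_{\ourD}$. Substituting the Dirac equation $\opDirac\Psi = \Phi - \mathrm{i}M\Psi$ and using that $\la\Psi,\Psi\ra_{\ourD}\in\RR$ (so that the mass contribution is purely imaginary and disappears under $\Re$), this reduces to $2w\,\Re\la\Phi,\Psi\ra_{\ourD}$. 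The second summand unpacks, directly from the definition of $V[\Psi]$, as $V[\Psi](w) = g^{\alpha\beta}(\partial_\beta w)\la\Psi,\partial_\alpha\cdot\Psi\ra_{\ourD} = \la\Psi,\mathrm{grad}(w)\cdot\Psi\ra_{\ourD}$. Integrating over $\Mcal_{[s_0,s_1]}$, writing $\mathrm{Vol}_g = \lapsb\,\mathrm{Vol}_{\sigmab}\wedge\diff s$ via Lemma~\ref{lem1-25-mai-2025}, and evaluating the boundary fluxes through $\Mcal_{s_0},\Mcal_{s_1}$ via $g(V[\Psi],\vec n) = \la\Psi,\vec n\cdot\Psi\ra_{\ourD}$ (see \eqref{eq2-02-mai-2025}), together with the decay assumption that discards the lateral boundary at spatial infinity, yields \eqref{eq1a-05-april-2025} after rearrangement.

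For \eqref{eq1-10-oct-2025}, I would apply the same procedure to the rescaled vector field $w\,V[t^{-1}\Psi]$. Since $V$ is quadratic in its spinor argument and $t$ is real, one has $V[t^{-1}\Psi] = t^{-2}V[\Psi]$, so in effect one computes $\mathrm{div}(wt^{-2}V[\Psi])$. Compared with the first identity, the only new contribution comes from the derivative of the explicit factor $t^{-2}$, namely
\[
w\,V[\Psi](t^{-2}) \;=\; -2w\,t^{-3}\,g^{0\mu}\la\Psi,\partial_\mu\cdot\Psi\ra_{\ourD}
\;=\; -2w\,t^{-1}\,g^{0\mu}\la t^{-1}\Psi,\partial_\mu\cdot t^{-1}\Psi\ra_{\ourD},
\]
which is precisely the additional interior integrand appearing on the left-hand side of \eqref{eq1-10-oct-2025} (up to the sign picked up in the rearrangement). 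The $\mathrm{grad}(w)$ term and the boundary flux transform covariantly under $\Psi\mapsto t^{-1}\Psi$, and the source term becomes $t^{-1}w\,\Re\la\Phi,\Psi\ra_{\ourD}$ once one factor of $t^{-1}$ is absorbed into the spinor pairing.

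The only delicate step is sign-bookkeeping in the Lorentzian Stokes theorem on the slab bounded by the spacelike hypersurfaces $\Mcal_{s_0},\Mcal_{s_1}$ with future-oriented unit normal $\vec n$, together with the vanishing of the mass contribution. The latter rests on the Hermitian symmetry of $\la\cdot,\cdot\ra_{\ourD}$ combined with the antisymmetry $\la X\cdot\Psi,\Phi\ra_{\ourD} = -\la\Psi,X\cdot\Phi\ra_{\ourD}$, both of which are already used explicitly in the proof of Proposition~\ref{propo-JDK02}. No additional mechanism is therefore required; the remaining work is algebraic.
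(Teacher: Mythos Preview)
Your proposal is correct and follows essentially the same route as the paper. For \eqref{eq1a-05-april-2025} the argument is identical: compute $\mathrm{div}(wV[\Psi])$ via Proposition~\ref{propo-JDK02} and the Leibniz rule, note that the mass contribution has vanishing real part, integrate with Stokes, and decompose the volume form via Lemma~\ref{lem1-25-mai-2025}. For \eqref{eq1-10-oct-2025} the paper proceeds by first writing the modified Dirac equation
\[
\opDirac(t^{-1}\Psi)+\mathrm{i}M\,t^{-1}\Psi+t^{-1}g^{\mu 0}\del_\mu\cdot(t^{-1}\Psi)=t^{-1}\Phi
\]
and then applying \eqref{eq1a-05-april-2025} with $\Psi$ replaced by $t^{-1}\Psi$; this is equivalent to your direct computation of $\mathrm{div}\big(wV[t^{-1}\Psi]\big)=\mathrm{div}\big(wt^{-2}V[\Psi]\big)$, the extra commutator term $t^{-1}g^{\mu0}\del_\mu\cdot(t^{-1}\Psi)$ in the paper's approach being exactly the contribution $wV[\Psi](t^{-2})$ in yours.
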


\begin{proof} 
\bse
Recall that
\be
\mathrm{grad}(w) =: g^{\alpha\beta}\del_{\alpha}w \del_{\beta}. 
\ee
The identity 
\be
\mathrm{div}(w V[\Psi]) 
= \la\Psi,\mathrm{grad}(w)\cdot\Psi\ra_{\ourD}
+ w\mathrm{div}(V[\Psi])
\ee
implies, thanks to Proposition~\ref{propo-JDK02},  that 
\begin{equation}\label{eq1-13-feb-2025}
\mathrm{div}(w V[\Psi]) 
= 2\Re(\la w\opDirac\Psi,\Psi \ra_{\ourD} )
+ \la \Psi,\mathrm{grad}(w)\cdot\Psi\ra_{\ourD}.
\end{equation}
On the other hand, we  have 
\begin{equation}\label{eq9-05-april-2025}
2\Re(\la\mathrm{i}w M\Psi,\Psi\ra_{\ourD}) 
= w\la \Psi, \mathrm{i}M\Psi\ra_{\ourD} + w\la \mathrm{i}M\Psi,\Psi\ra_{\ourD} = 0.
\end{equation}
We integrate the sum of \eqref{eq1-13-feb-2025} and \eqref{eq9-05-april-2025} in the domain $\mathcal{M}_{[s_0,s_1]}$ and apply Stokes' formula\footnote{See, for instance, \cite[Section 3.2]{Alinhac-book} with, in the convention therein, $s = f$ and $\nabla f/||\nabla f|| = - \vec{n}$.}: 
\begin{equation}
\label{equa-10mai2025}
\aligned
\int_{\mathcal{M}_{s_0}} - \int_{\mathcal{M}_{s_1}}w V[\Psi]\cdot \vec{n} \mathrm{Vol}_{\sigmab}
=  \int_{\mathcal{M}_{[s_0,s_1]}}
\hspace{-.4cm}\la \Psi,\mathrm{grad}(w)\cdot\Psi\ra_{\ourD} \mathrm{Vol}_g
+ 2\int_{\mathcal{M}_{[s_0,s_1]}}\hspace{-.4cm}w \Re\big\la \Phi , \Psi\big\ra_{\ourD}  \mathrm{Vol}_g.
\endaligned
\end{equation}
It remains to apply Lemma~\ref{lem1-25-mai-2025} on $\mathrm{Vol}_g$, and we obtain \eqref{eq1a-05-april-2025}.
\ese
On the other hand, to derive \eqref{eq1-10-oct-2025}, we use the equation
\begin{equation}\label{eq5-10-oct-2025}
\opDirac (t^{-1}{\Psi}) + \mathrm{i}Mt^{-1}\Psi + t^{-1}g^{\mu0}\del_{\mu}\cdot(t^{-1}\Psi) 
= t^{-1}\Phi
\end{equation}
and apply \eqref{eq1a-05-april-2025}.
\end{proof}

%-----------------------------------------------------------------------

Observe in passing that, with respect to the Minkowski metric, in the domain $\Mcal^{\Hcal}_{[s_0,s_1]}$
\be
\vec{n}_{\eta}  := \zeta^{-1} \del_t + \zeta^{-1}\delb_rT(x^a/r)\del_a, \qquad \mathrm{Vol}_{\sigmab} = (s/t)\diff x.
\ee
In the global orthogonal frame $\{e_i\}$ with $e_0 = \vec{n}$, we have 
\be
\int_{\Hcal_{s_1}}|\psi|^2(s/t)\diff x = \int_{\Hcal_{s_0}}  |\psi|^2(s/t)\diff x 
\ee
for the case $\Phi \equiv 0$ and $\Psi$ supported in $\{r<t-1\}$. This is nothing but the standard conservation law for the probability of particles.

We now introduce the weighted energy 
\begin{equation}\label{eq4-19-aout-2025}
\Ebf_{g,w}(s,\Psi) := \int_{\Mcal_s}w\la\Psi,\vec{n}\cdot\Psi\ra_{\ourD} \mathrm{Vol}_{\sigmab}
\end{equation}
and the pointwise norm associated with $\vec{n}$:
\begin{equation}\label{eq3-26-june-2025}
|\Psi|_{\vec{n}}^2 := \la\Psi,\vec{n}\cdot\Psi \ra_{\ourD}.
\end{equation}
From a physical perspective, the expression $|\Psi|_{\vec{n}}^2$ represents the probability density of a fermion detected by the observer whose worldline is tangent to $\vec{n}$.

}

\subsection{ Norms associated with timelike vector fields}
\label{section===82}
\paragraph{Decomposition of vector fields.}
{ 
In \eqref{eq3-26-june-2025}, we introduced a norm on spinor fields, which is now extended as follows.  Let $\vec{s}$ be a (non-vanishing) future-oriented timelike vector field defined in (possibly a subset of) $\Mcal_{[s_0,s_1]}$. For any vector field $X$, we then introduce the orthogonal decomposition
\bse
\be
X^{\perp}_{\vec{s}} := \frac{(X,\vec{s})_g\vec{s}}{(\vec{s},\vec{s})_g},\quad X^{\top}_{\vec{s}} = X - X^{\perp}_{\vec{s}}.
\ee
\ese
It is obvious that we define a norm by 
\begin{equation}\label{eq4-26-june-2025}
|X|_{\vec{s}} := \big||X^{\perp}_{\vec{s}}|_g\big| + |X^{\top}_{\vec{s}}|_{\sigmab}.
\quad
\end{equation}
In particular, this decomposition applies with the choice of the future oriented unit normal vector $\vec{n}$  to $\Mcal_s$, and then we use the short-hand notation 
\begin{equation}\label{eq4-27-june-2025}
X^{\perp} = X^{\perp}_{\vec{n}}, \qquad X^{\top} = X^{\top}_{\vec{n}}.
\end{equation}

\paragraph{Dirac form.}

We begin with the following result.
\begin{lemma}[Cauchy-Schwartz inequality]
\label{prop2-04-mai-2025}
Let $\vec{n}$ be a future oriented, timelike unit vector field. Then for any spinor fields $\Phi,\Psi$ defined in $\Mcal_{[s_0,s_1]}$, one has 
\begin{equation}
\big|\la \Phi,\vec{n}\cdot\Psi\ra_{\ourD}\big|\lesssim |\Phi|_{\vec{n}}|\Psi|_{\vec{n}}.
\end{equation}
\end{lemma}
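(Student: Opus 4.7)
The statement is a pointwise inequality in the fibres of the spinor bundle, so I would first fix a point $p\in\Mcal_{[s_0,s_1]}$ and work entirely in $\ourS_p$. The key observation is that the form
$$
B(\Phi,\Psi) := \la\Phi,\vec{n}\cdot\Psi\ra_{\ourD}
$$
is, wherever $\vec{n}$ is future-directed unit timelike, positive-definite and of inner-product type on the four-dimensional fibre $\ourS_p$. The sought-after estimate then reduces to an ordinary Cauchy--Schwarz inequality on a finite-dimensional Hermitian space.

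To make this precise I would choose an orthonormal tetrad $\{e_0,e_{\ih}\}$ at $p$ adapted to $\vec{n}$ (so $e_0=\vec{n}_p$), denote by $\phi,\psi\in\mathbb{C}^4$ the coordinate representations of $\Phi,\Psi$, and write $\gamma_0$ for the Clifford matrix associated with $e_0$. The diagonal computation already carried out in \eqref{eq2-02-mai-2025} gives
$$
|\Psi|_{\vec{n}}^{\,2}
= \la\Psi,\vec{n}\cdot\Psi\ra_{\ourD}
= (\gamma_0\psi)^{\dagger}(\gamma_0\psi)
= \|\gamma_0\psi\|_{\mathbb{C}^4}^{\,2},
$$
and by polarization (or, equivalently, by repeating the very same tetrad calculation for two independent spinors) the off-diagonal pairing satisfies
$$
B(\Phi,\Psi) = (\gamma_0\phi)^{\dagger}(\gamma_0\psi).
$$
In other words, $B$ at $p$ coincides with the pull-back of the standard Hermitian inner product on $\mathbb{C}^4$ under the linear isomorphism $\Psi\mapsto\gamma_0\psi$.

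The inequality is then immediate from the classical Cauchy--Schwarz inequality on $\mathbb{C}^4$:
$$
\bigl|(\gamma_0\phi)^{\dagger}(\gamma_0\psi)\bigr|
\leq \|\gamma_0\phi\|_{\mathbb{C}^4}\,\|\gamma_0\psi\|_{\mathbb{C}^4}
= |\Phi|_{\vec{n}}\,|\Psi|_{\vec{n}},
$$
which in fact holds with a universal constant equal to $1$ (so the $\lesssim$ in the statement is amply satisfied). Both sides being manifestly intrinsic, the pointwise estimate at the fibre $\ourS_p$ extends to the global inequality claimed, independent of any choice of adapted tetrad.

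I do not foresee any substantial obstacle in this argument. The only item that requires a short verification is the off-diagonal identity for $B(\Phi,\Psi)$, and this follows either by polarizing the diagonal identity recorded in Section~\ref{section===43}, or by re-running the one-line tetrad calculation used in the proof of \eqref{eq2-02-mai-2025} with two independent spinor coordinates. Once this reduction to a finite-dimensional Hermitian space is made, Cauchy--Schwarz closes the argument.
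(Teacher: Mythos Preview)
Your argument is correct. Choosing an adapted tetrad with $e_0=\vec{n}$, the off-diagonal identity $\la\Phi,\vec{n}\cdot\Psi\ra_{\ourD}=\phi^{\dagger}\gamma_0^2\psi=\phi^{\dagger}\psi=(\gamma_0\phi)^{\dagger}(\gamma_0\psi)$ follows exactly as you say, and the standard Cauchy--Schwarz inequality on $\mathbb{C}^4$ gives the bound with constant $1$.

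The paper proceeds differently: it stays gauge-invariant and never picks a tetrad. Instead it uses the classical discriminant argument, expanding $f(t)=\la\Phi-t\Psi,\vec{n}\cdot(\Phi-t\Psi)\ra_{\ourD}\geq 0$ for real $t$ to bound the real part, and then $g(t)=\la\Phi-\mathrm{i}t\Psi,\vec{n}\cdot(\Phi-\mathrm{i}t\Psi)\ra_{\ourD}\geq 0$ to bound the imaginary part. Your route is more concrete and yields the sharp constant directly; the paper's route avoids any coordinate choice, in keeping with its gauge-invariant philosophy, at the cost of handling real and imaginary parts separately. Both rest on the same fact---positivity of $\la\cdot,\vec{n}\cdot\cdot\ra_{\ourD}$---and either is a perfectly good proof here.
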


\begin{proof} We only need to check that, for $t\in \RR$,
$$
f(t) := \la \Phi -t\Psi,\vec{n}\cdot (\Phi- t\Psi) \ra_{\ourD}\geq 0, 
$$
which is equivalent to saying that the discriminant is always negative, that is, 
$$
|\Re\la \Phi,\vec{n}\cdot\Psi\ra_{\ourD}|^2\leq |\Phi|_{\vec{n}}|\Psi|_{\vec{n}}.
$$
We can similarly consider
$$
g(t) := \la \Phi - \mathrm{i}t\Psi, \vec{n}(\Phi - \mathrm{i}t\Psi)\ra_{\ourD}\geq 0, 
$$
and the above argument on the discriminant also leads us to
$|\Im \la \Phi,\vec{n}\cdot\Psi\ra_{\ourD}|^2\leq |\Phi|_{\vec{n}}|\Psi|_{\vec{n}}$. 
\end{proof}

\begin{lemma}\label{lem1-06-april-2025}
For any vector field $X$ defined in (an open subset of) $\Mcal_{[s_0,s_1]}$, one has 
\begin{equation}\label{eq1-06-april-2025}
|\la \Phi,X\cdot\Psi\ra_{\ourD}|\lesssim |X|_{\vec{n}}|\Phi|_{\vec{n}}|\Psi|_{\vec{n}}.
\end{equation}
\end{lemma}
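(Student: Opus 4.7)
The plan is to reduce to the Cauchy--Schwarz inequality of Lemma~\ref{prop2-04-mai-2025} via the orthogonal splitting $X = X^\perp + X^\top$, treating the $\vec n$-parallel part and the tangential part separately. By the triangle inequality
$$
|\la \Phi, X\cdot\Psi\ra_{\ourD}| \leq |\la \Phi, X^\perp\cdot\Psi\ra_{\ourD}| + |\la \Phi, X^\top\cdot\Psi\ra_{\ourD}|,
$$
so it suffices to estimate each piece by $|\Phi|_{\vec n}|\Psi|_{\vec n}$ times the appropriate norm of $X^\perp$ or $X^\top$.

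For the normal part, I write $X^\perp = c\,\vec n$ with $|c| = |X^\perp|_g$ since $\vec n$ is a unit timelike vector, and Lemma~\ref{prop2-04-mai-2025} applies directly to bound $|\la \Phi, X^\perp\cdot\Psi\ra_{\ourD}| = |c|\,|\la \Phi, \vec n\cdot\Psi\ra_{\ourD}| \lesssim |X^\perp|_g|\Phi|_{\vec n}|\Psi|_{\vec n}$. For the tangential part, I choose a $\sigmab$-orthonormal frame $\{e_1,e_2,e_3\}$ of $T\Mcal_s$, so that $\{\vec n,e_1,e_2,e_3\}$ is a Lorentzian orthonormal tetrad, and decompose $X^\top = c^a e_a$ with $|X^\top|_{\sigmab}^2 = \sum_a (c^a)^2$; it then suffices to prove the per-direction bound
$$
|\la \Phi, e_a\cdot \Psi\ra_{\ourD}| \lesssim |\Phi|_{\vec n}\,|\Psi|_{\vec n}
$$
for each $a\in\{1,2,3\}$, and to sum with Cauchy--Schwarz in $\RR^3$.

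The key observation is that left multiplication by a unit spacelike vector orthogonal to $\vec n$ preserves the $|\cdot|_{\vec n}$-norm, up to a universal constant. Setting $\Psi_a := e_a\cdot\Psi$ and using the Clifford anti-commutation $\vec n\cdot e_a + e_a\cdot\vec n = -2g(\vec n,e_a) = 0$, I compute
\begin{equation*}
|\Psi_a|_{\vec n}^2
= \la e_a\cdot\Psi,\,\vec n\cdot e_a\cdot\Psi\ra_{\ourD}
= -\la e_a\cdot\Psi,\,e_a\cdot\vec n\cdot\Psi\ra_{\ourD},
\end{equation*}
and then moving $e_a$ to the other slot via the (anti-)symmetry property of the Dirac form together with $e_a\cdot e_a = -g(e_a,e_a) = -1$ leaves me with a multiple of $\la \Psi,\vec n\cdot\Psi\ra_{\ourD} = |\Psi|_{\vec n}^2$. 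Once $|e_a\cdot\Psi|_{\vec n} \simeq |\Psi|_{\vec n}$ is established, Lemma~\ref{prop2-04-mai-2025} applied with $\Psi$ replaced by $e_a\cdot\Psi$ yields the per-direction bound, and the definition \eqref{eq4-26-june-2025} of $|X|_{\vec n}$ closes the estimate.

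The only delicate step is the sign-tracking in the Clifford identity $|e_a\cdot\Psi|_{\vec n}^2 \simeq |\Psi|_{\vec n}^2$: with the paper's anti-symmetry convention $\la e_i\cdot\Psi,\Phi\ra_{\ourD} = -\la \Psi,e_i\cdot\Phi\ra_{\ourD}$ and the relation $e_a\cdot e_a = -1$, two sign flips must combine to give the positive output $|\Psi|_{\vec n}^2$; this is the same pairing that underlies positivity of $g(V[\Psi],\vec n)$ in \eqref{eq2-02-mai-2025}, so it is consistent, but it is the one spot where care is needed to avoid getting a negative quantity.
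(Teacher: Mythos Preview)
Your decomposition $X = X^\perp + X^\top$ and the norm-preservation identity $|e_a\cdot\Psi|_{\vec n} = |\Psi|_{\vec n}$ are both correct, and this is a genuinely different route from the paper. The paper instead picks a tetrad with $e_0=\vec n$ and $e_1 = X^\top/|X^\top|$, so that $X$ has only the two components $X^0,X^1$; writing the Dirac form in those coordinates gives $\la\Phi,X\cdot\Psi\ra_{\ourD} = X^0\phi^\dagger\psi + X^1\phi^\dagger\gamma_0\gamma_1\psi$, and a single application of Cauchy--Schwarz in $\mathbb{C}^4$ finishes it. Aligning the frame with $X^\top$ avoids both the sum over three directions (and the associated $\sqrt 3$) and the extra Clifford manipulation.

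There is, however, a small gap in your final step. Applying Lemma~\ref{prop2-04-mai-2025} literally with $\Psi$ replaced by $e_a\cdot\Psi$ bounds $|\la\Phi,\vec n\cdot e_a\cdot\Psi\ra_{\ourD}|$, not the quantity $|\la\Phi,e_a\cdot\Psi\ra_{\ourD}|$ you need. The fix is to insert $\vec n\cdot\vec n = -g(\vec n,\vec n) = 1$: since $\vec n$ and $e_a$ anticommute,
\[
\la\Phi,e_a\cdot\Psi\ra_{\ourD}
= \la\Phi,\vec n\cdot\vec n\cdot e_a\cdot\Psi\ra_{\ourD}
= \la\Phi,\vec n\cdot(\vec n\cdot e_a\cdot\Psi)\ra_{\ourD},
\]
and \emph{now} Lemma~\ref{prop2-04-mai-2025} applies with $\Psi$ replaced by $\vec n\cdot e_a\cdot\Psi$, whose $|\cdot|_{\vec n}$-norm equals $|\Psi|_{\vec n}$ by two applications of your norm-preservation identity (once for $\vec n$, once for $e_a$). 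With this insertion your argument is complete. As a side remark, the Dirac form here satisfies the \emph{symmetry} $\la X\cdot\Phi,\Psi\ra_{\ourD} = \la\Phi,X\cdot\Psi\ra_{\ourD}$ rather than anti-symmetry; your sign-tracking happens to come out right regardless, but it is worth keeping the convention straight.
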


\begin{proof}
\bse
Let $\{e_0,e_1,e_2,e_3\}$ be a tetrad on $\Mcal_{[s_0,s_1]}$ with $e_0 = \vec{n}$ and\footnote{If $X^{\top}=0$, then take any positive oriented frame $\{e_1,e_2,e_3\}$ tangent to $\Mcal_s$.} $e_1 = X^{\top}/|X^{\top}|$. Let $\phi,\psi$ be the coordinate of $\Phi,\Psi$ in this tetrad. Then we find 
$$
X = X^0\vec{n} + X^1e_1, \qquad |X^0| = ||X^{\perp}|_g|, \qquad |X^1| = |X^{\top}|_{\sigmab} 
$$
and
$$
\la \Phi,X\cdot\Psi\ra_{\ourD} = X^0\phi^{\dag}\psi + X^1\phi^{\dag}\gamma_0\gamma_1\psi, 
$$
which leads us to 
$$
|\la \Psi,X\cdot\Psi\ra_{\ourD}|\leq |X^0||\phi||\psi| + C|X^1||\phi||\psi|. \qedhere
$$
\ese
\end{proof}

\begin{corollary}\label{cor1-04-mai-2025}
In the flat case,  one has 
\begin{equation}
\zeta\big|\la\Phi,\del_t\cdot\Psi \ra_{\ourD}\big|\lesssim |\Phi|_{\vec{n}}|\Psi|_{\vec{n}}
\qquad \text{ when } g=\eta. 
\end{equation}
\end{corollary}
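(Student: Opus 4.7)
\textbf{Plan of proof for Corollary~\ref{cor1-04-mai-2025}.} The plan is to deduce this directly from Lemma~\ref{lem1-06-april-2025} applied with $X = \del_t$, reducing the claim to the pointwise bound $|\del_t|_{\vec{n}}\lesssim \zeta^{-1}$ in the Minkowski background. This is purely an exercise in the flat expressions collected in Lemma~\ref{lem1-13-june-2025}; no new identity is required.

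First, I would compute the normal component $\del_t^{\perp}$. Using $\vec{n}_\eta = \zeta^{-1}\del_t + \zeta^{-1}\delb_rT\,(x^a/r)\del_a$, one finds $(\del_t,\vec{n})_\eta = -\zeta^{-1}$ and $(\vec{n},\vec{n})_\eta = -1$, hence $\del_t^{\perp} = \zeta^{-1}\vec{n}$ and $\bigl||\del_t^{\perp}|_g\bigr| = \zeta^{-1}$.

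Next, for the tangential component, rather than expanding $\del_t^{\top} = \del_t - \zeta^{-1}\vec{n}$ in coordinates, I would use Pythagoras with respect to the Lorentzian metric: since $\del_t^{\perp}\perp \del_t^{\top}$ and $g(\del_t,\del_t) = -1$, one obtains
\begin{equation*}
|\del_t^{\top}|_{\sigmab}^2 \;=\; g(\del_t^{\top},\del_t^{\top}) \;=\; g(\del_t,\del_t) - g(\del_t^{\perp},\del_t^{\perp}) \;=\; -1 + \zeta^{-2} \;=\; \zeta^{-2}(1-\zeta^2) \;\geq\; 0,
\end{equation*}
so that $|\del_t^{\top}|_{\sigmab} = \zeta^{-1}\sqrt{1-\zeta^2} \leq \zeta^{-1}$. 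By the definition \eqref{eq4-26-june-2025}, this yields $|\del_t|_{\vec{n}} \leq 2\zeta^{-1}$.

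Finally, I would invoke Lemma~\ref{lem1-06-april-2025} with $X=\del_t$ to obtain $|\langle \Phi, \del_t\cdot\Psi\rangle_{\ourD}|\lesssim |\del_t|_{\vec{n}}\,|\Phi|_{\vec{n}}|\Psi|_{\vec{n}}\lesssim \zeta^{-1}|\Phi|_{\vec{n}}|\Psi|_{\vec{n}}$, and multiply through by $\zeta$ to conclude. There is no real obstacle here; the only subtlety worth flagging is that $\zeta \leq 1$ everywhere (with equality precisely on $\Mcal^{\Ecal}$, where $\xi=0$), so the bound degenerates exactly where one expects --- in the hyperboloidal region where $\vec{n}$ becomes increasingly transverse to $\del_t$.
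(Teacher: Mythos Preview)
Your proof is correct and follows essentially the same approach as the paper: both reduce to Lemma~\ref{lem1-06-april-2025} via the bound $|\del_t|_{\vec{n}}\lesssim\zeta^{-1}$, obtained from the orthogonal decomposition of $\del_t$ with respect to $\vec{n}_\eta$. The paper writes the decomposition explicitly as $\zeta\del_t=\vec{n}_\eta-\zeta^{-1}\delb_rT\,(x^a/r)\delb_a$ and reads off the norms directly, whereas you recover the tangential norm via Pythagoras; the two computations agree since $|\delb_rT|=\sqrt{1-\zeta^2}$.
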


\begin{proof} We only need to observe that
$\zeta\del_t = \vec{n}_{\eta} - \zeta^{-1}\delb_rT (x^a/r)\delb_a$, in which 
$$
|\vec{n}|_\eta = 1, \qquad |(x^a/r)\delb_a|_{\eta} = \zeta. \qedhere
$$
\end{proof}

%------------------------------------------------- 

\paragraph{A norm based on Clifford multiplication of spinor fields.}
From another perspective, a vector $X$ can be related to a linear map between spinor fields via the Clifford multiplication, namely 
$
X: \Psi\mapsto X\cdot\Psi.
$
Thus we can define a (pointwise) norm on $X$ by setting 
\be
|X|_{\vec{s}} := \sup_{|\Psi|_{\vec{s}}=1}|X\cdot\Psi|_{\vec{s}}.
\ee
This may seem, at first, to lead to a conflict of notation with respect to~\eqref{eq4-26-june-2025}: however,   they do define the same object, as we now show. 

\begin{lemma}\label{lem1-16-july-2025}
For any vector $X$ and any future-oriented timelike unit vector $\vec{s}$, one has 
\begin{equation}
\sup_{|\Psi|_{\vec{s}}=1}|X\cdot\Psi|_{\vec{s}} = ||X^{\perp}_{\vec{s}}|_g| + |X^{\top}_{\vec{s}}|_{\sigmab}.
\end{equation}
\end{lemma}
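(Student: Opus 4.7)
}

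The plan is to reduce the statement to an explicit matrix computation in a carefully chosen orthonormal tetrad. First I would handle the degenerate case $X^{\top}_{\vec{s}}=0$ separately: then $X=X^{0}\vec{s}$ with $|X^{0}|=||X^{\perp}_{\vec{s}}|_{g}|$, so $X\cdot\Psi=X^{0}\,\vec{s}\cdot\Psi$ and the claim reduces to the identity $|\vec{s}\cdot\Psi|_{\vec{s}}=|\Psi|_{\vec{s}}$, which follows from Clifford anticommutation $\vec{s}\cdot\vec{s}=1$ and the definition of $|\cdot|_{\vec{s}}$. In the main case $X^{\top}_{\vec{s}}\neq 0$, I would choose a local orthonormal tetrad $\{e_{0},e_{1},e_{2},e_{3}\}$ with $e_{0}=\vec{s}$ and $e_{1}=X^{\top}_{\vec{s}}/|X^{\top}_{\vec{s}}|_{\sigmab}$ (and $e_{2},e_{3}$ any orthonormal completion of the $\sigmab$-orthogonal complement). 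Then $X=X^{0}e_{0}+X^{1}e_{1}$ with $|X^{0}|=||X^{\perp}_{\vec{s}}|_{g}|$ and $X^{1}=|X^{\top}_{\vec{s}}|_{\sigmab}$.

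Next I would pass to coordinates with respect to this tetrad. Writing $\psi$ for the coordinate of $\Psi$ and using the Clifford relations $\gamma_{0}^{2}=1$, $\gamma_{i}^{2}=-1$ together with the Hermiticity convention $\gamma_{0}^{\dagger}=\gamma_{0}$, $\gamma_{i}^{\dagger}=-\gamma_{i}$ associated with the Dirac inner product (cf.\ \eqref{eq2-02-mai-2025}), one gets $|\Psi|_{\vec{s}}^{2}=\psi^{\dagger}\psi$ and
\begin{equation*}
X\cdot\Psi\ \longleftrightarrow\ X^{0}\gamma_{0}\psi + X^{1}\gamma_{1}\psi,
\qquad
\gamma_{0}\bigl(X^{0}\gamma_{0}\psi + X^{1}\gamma_{1}\psi\bigr)=X^{0}\psi+X^{1}\gamma_{0}\gamma_{1}\psi.
\end{equation*}
Expanding $|X\cdot\Psi|_{\vec{s}}^{2}=\bigl|X^{0}\psi+X^{1}\gamma_{0}\gamma_{1}\psi\bigr|^{2}$ then gives
\begin{equation*}
|X\cdot\Psi|_{\vec{s}}^{2} = \bigl(|X^{0}|^{2}+|X^{1}|^{2}\bigr)|\psi|^{2} + 2X^{0}X^{1}\,\Re\bigl(\psi^{\dagger}\gamma_{0}\gamma_{1}\psi\bigr).
\end{equation*}

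The key algebraic input is the observation that the matrix $\gamma_{0}\gamma_{1}$ is Hermitian (from $(\gamma_{0}\gamma_{1})^{\dagger}=\gamma_{1}^{\dagger}\gamma_{0}^{\dagger}=-\gamma_{1}\gamma_{0}=\gamma_{0}\gamma_{1}$, using the anticommutation relation for the Clifford product of two orthogonal vectors) and squares to the identity: $(\gamma_{0}\gamma_{1})^{2}=-\gamma_{0}^{2}\gamma_{1}^{2}=1$. Therefore its spectrum is $\{\pm 1\}$ and $|\psi^{\dagger}\gamma_{0}\gamma_{1}\psi|\leq|\psi|^{2}$, with equality attained on unit eigenvectors. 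Consequently
\begin{equation*}
|X\cdot\Psi|_{\vec{s}}^{2}\leq \bigl(|X^{0}|+|X^{1}|\bigr)^{2}|\psi|^{2},
\end{equation*}
and choosing $\psi$ to be a unit eigenvector of $\gamma_{0}\gamma_{1}$ associated with the eigenvalue $\mathrm{sign}(X^{0}X^{1})$ realizes equality. Taking the supremum over $|\Psi|_{\vec{s}}=1$ yields the claimed identity $\sup|X\cdot\Psi|_{\vec{s}}=|X^{0}|+|X^{1}|=||X^{\perp}_{\vec{s}}|_{g}|+|X^{\top}_{\vec{s}}|_{\sigmab}$.

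The only delicate point is keeping track of signs and Hermiticity of the $\gamma$-matrices in the chosen tetrad; once one observes that $\gamma_{0}\gamma_{1}$ is a Hermitian involution, the rest is a standard $2\times2$ style optimization on the cross term. All estimates are pointwise and the construction of the tetrad is smooth in a neighborhood of any point where $X^{\top}_{\vec{s}}\neq 0$, so the identity is intrinsic as claimed.
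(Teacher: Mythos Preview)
Your proof is correct and follows essentially the same approach as the paper: choose a tetrad with $e_0=\vec{s}$ and $e_1=X^{\top}_{\vec{s}}/|X^{\top}_{\vec{s}}|_{\sigmab}$, reduce to the observation that $\gamma_0\gamma_1$ is a Hermitian involution with eigenvalues $\pm1$, and saturate the bound by picking a unit eigenvector of the appropriate sign. The only cosmetic difference is that the paper first computes $X\cdot\vec{s}\cdot X=(a^2+b^2)\vec{s}+2ab\,\vec{f}$ in the Clifford algebra before passing to coordinates, whereas you pass to coordinates immediately and expand $|X^0\psi+X^1\gamma_0\gamma_1\psi|^2$ directly; these lead to the identical formula $(|X^0|^2+|X^1|^2)|\psi|^2+2X^0X^1\,\psi^{\dagger}\gamma_0\gamma_1\psi$ and the same optimization.
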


\begin{proof} 
We are going to establish first an upper bound and then construct a specific spinor that achieves the bound. 

\bse
{\bf 1. Derivation of an upper bound.} 
For any vector $X$ interpreted as an element of the Clifford algebra, we consider (here for simplicity, in this proof we write $X^{\perp}$ for $X^{\perp}_{\vec{s}}$)
\be
\aligned
X\cdot\vec{s}\cdot X & =  (X^{\perp} + X^{\top})\cdot\vec{s}\cdot(X^{\perp} + X^{\top}) 
\\
& =  X^{\perp}\cdot\vec{s}\cdot X^{\perp} + X^{\top}\cdot\vec{s}\cdot X^{\top} 
+ X^{\perp}\cdot\vec{s}\cdot X^{\top} + X^{\top}\cdot\vec{s}\cdot X^{\perp}, 
\endaligned
\ee
and we introduce $a, b, \vec{f}$ by 
\be
X^{\perp} =:  a\vec{s}, \qquad X^{\top} =: b\vec{f}, 
\ee
where $a,b\in \RR$, $\vec{f}$ is orthogonal to $\vec{s}$ and $|\vec{f}|_g = 1$. We can then compute 
\be
X\cdot\vec{s}\cdot X
= a^2\vec{s}\cdot\vec{s}\cdot\vec{s} + ab\vec{s}\cdot\vec{s}\cdot\vec{f} + ba\vec{f}\cdot\vec{s}\cdot\vec{s} + b^2\vec{f}\cdot\vec{s}\cdot\vec{f}.
\ee
Since $\vec{s} \cdot \vec{s} \cdot \vec{s} = (\vec{s} \cdot \vec{s}) \cdot \vec{s} = \vec{s}$ for a unit timelike vector. And we also used that $v \cdot w + w \cdot v = -2 g(v, w) \cdot 1 = 0$ for orthogonal vectors $v,w$, hence $\vec{s}\cdot\vec{f} = - \vec{f}\cdot\vec{s}$ since $\vec{s} \perp \vec{f}$. That is, we have  that $\vec{f} \cdot \vec{s} \cdot \vec{f} = - \vec{f} \cdot \vec{f} \cdot \vec{s} = g(\vec{f}, \vec{f}) \vec{s} = \vec{s}$. Thus
\begin{equation}
X\cdot\vec{s}\cdot X = (a^2+b^2)\vec{s} + 2ab\vec{f}.
\end{equation}
\ese

Consequently, we deduce that 
\bse
\be
\aligned
\la X\cdot\Psi,\vec{s}\cdot X\cdot\Psi\ra_{\ourD} 
 = \la \Psi,X\cdot\vec{s}\cdot X\cdot\Psi\ra_{\ourD}
 = (a^2+b^2)\la \Psi,\vec{s}\cdot\Psi\ra_{\ourD} + 2ab\la\Psi,\vec{f}\cdot \Psi\ra_{\ourD}.
\endaligned
\ee
By introducing the coordinate $\psi$ of $\Psi$ in the orthonormal frame $\{e_0=\vec{s},e_1=\vec{f},e_2,e_3\}$,
we obtain 
\be
\la \Psi,\vec{f}\cdot\Psi\ra_{\ourD} = \psi^{\dag}\gamma_0\gamma_1\psi. 
\ee
We note that
\begin{equation}
\gamma_0\gamma_1 = 
\left(
\begin{array}{cc}
\sigma_1&0
\\
0& \quad - \sigma_1
\end{array}
\right).
\end{equation}
This is a Hermitian matrix with eigenvalues $\{\pm1\}$. Thus we find 
\be
|\psi^{\dag}\gamma_0\gamma_1\psi|\leq |\psi|^2, 
\ee
that is,
\be
|X\cdot\Psi|_{\vec{s}}^2 = \la X\cdot\Psi,\vec{s}\cdot X\cdot\Psi\ra_{\ourD} \leq (|a|+|b|)^2 |\psi|^2 
= (|a|+|b|)^2|\Psi|_{\vec{s}}^2,
\ee
so that
\be
|X|_{\vec{s}} \leq |a| + |b| = ||X^{\perp}|_g| + |X^{\top}|_{\overline \sigma}|. 
\ee
\ese
\bse
{\bf 2. Saturation of the upper bound.}
We now express the Dirac sesquilinear forms in the orthonormal frame consisting of 
\be
e_0 = \vec{s}, \qquad e_1=\vec{f}, \qquad  e_2, e_3\text{ tangent to } \{e_0,e_1\}.
\ee
Let $\psi\in\mathbb{C}^4$ be the coordinate of $\Psi$ in this frame. Then we have 
\be
\la X\cdot\Psi,\vec{s}\cdot X\cdot\Psi\ra_{\ourD} = (a^2+b^2)\psi^{\dag}\psi 
+ 2ab\psi\gamma_0\gamma_1\psi.
\ee
Suppose that $\psi^{\dag}\psi = 1$, which is equivalent to saying that $|\Psi|_{\vec{s}}=1$. We observe that
\be
\gamma_0\gamma_1 = 
\left(
\begin{array}{cc}
\sigma_1&0
\\
0& \quad - \sigma_1
\end{array}
\right).
\ee
We can thus take $\psi$ an eigenvector of $\gamma_0\gamma_1$ associated to the eigenvalue $|ab|/ab$ (the case $ab=0$ is trivial), which leads us to 
\be
\aligned
\la X\cdot\Psi,\vec{s}\cdot X\cdot \Psi\ra_{\ourD} 
& = a^2+b^2 + 2|ab| = (|a|+|b|)^2.
\endaligned
\ee
We thus find
\be
\sup_{|\Psi|_{\vec{s}}=1}|X\cdot\Psi|_{\vec{s}} \geq |a|+|b|, 
\ee
which provides us with the desired result.
\ese
\end{proof}
   
%-----------------------------------------------------------------

This leads us to the fact that $T_x\Mcal$ equipped with the Clifford multiplication together with the norm $|\cdot|_{\vec{s}}$ forms a normed algebra. More precisely, we can state the following property. 

\begin{corollary}\label{cor1-18-july-2025}
If $\Psi$ is a spinor field and $X_1,X_2,\cdots, X_n$ are finitely many vector fields, one has 
\begin{equation}
\Big|\Big(\prod_{\imath =1}^nX_{\imath}\Big)\cdot\Psi\Big|_{\vec{s}}\leq \prod_{\imath=1}^n|X_{\imath}|_{\vec{s}}|\Psi|_{\vec{s}}.
\end{equation}
\end{corollary}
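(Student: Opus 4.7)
The plan is to prove this by a straightforward induction on $n$, using Lemma~\ref{lem1-16-july-2025} as the key ingredient. Indeed, that lemma establishes the identity
\begin{equation*}
\sup_{|\Psi|_{\vec{s}}=1}|X\cdot\Psi|_{\vec{s}} = \big||X^{\perp}_{\vec{s}}|_g\big| + |X^{\top}_{\vec{s}}|_{\sigmab},
\end{equation*}
which says precisely that the right-hand side (the definition of $|X|_{\vec{s}}$ given in \eqref{eq4-26-june-2025}) coincides with the operator norm of Clifford multiplication by $X$ on the space of spinors equipped with $|\cdot|_{\vec{s}}$. By homogeneity, this immediately yields the one-vector estimate
\begin{equation*}
|X\cdot\Psi|_{\vec{s}} \leq |X|_{\vec{s}}\,|\Psi|_{\vec{s}},
\end{equation*}
valid pointwise for every spinor field $\Psi$ and every vector field $X$, which constitutes the base case $n=1$.

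For the inductive step, assuming the estimate for products of $n-1$ vector fields, I would set $\Phi := X_n \cdot \Psi$, which is again a spinor field at the given point. Applying the base case to $X_n$ and $\Psi$ gives $|\Phi|_{\vec{s}} \leq |X_n|_{\vec{s}}\,|\Psi|_{\vec{s}}$. Then, since by the associativity of Clifford multiplication
\begin{equation*}
\Big(\prod_{\imath=1}^{n} X_{\imath}\Big)\cdot \Psi = \Big(\prod_{\imath=1}^{n-1} X_{\imath}\Big)\cdot \Phi,
\end{equation*}
the induction hypothesis applied to $\Phi$ yields
\begin{equation*}
\Big|\Big(\prod_{\imath=1}^{n} X_{\imath}\Big)\cdot\Psi\Big|_{\vec{s}} \leq \Big(\prod_{\imath=1}^{n-1} |X_{\imath}|_{\vec{s}}\Big)|\Phi|_{\vec{s}} \leq \prod_{\imath=1}^{n}|X_{\imath}|_{\vec{s}}\,|\Psi|_{\vec{s}},
\end{equation*}
which completes the induction.

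There is no real obstacle here: the entire content is packaged in Lemma~\ref{lem1-16-july-2025}, whose proof splits the vector into its $\vec{s}$-parallel and $\vec{s}$-orthogonal components and then carries out the Clifford calculation $X\cdot \vec{s}\cdot X = (a^2+b^2)\vec{s}+2ab\vec{f}$ together with the spectral bound on $\gamma_0\gamma_1$. What the corollary really says is that $T_x\mathcal{M}$, equipped with Clifford multiplication and the norm $|\cdot|_{\vec{s}}$, acts as a normed algebra on the spinor space (endowed with $|\cdot|_{\vec{s}}$); the induction is simply the formalization of that submultiplicativity. The only point to be slightly careful about is that the inequality must be read pointwise in $\Mcal_{[s_0,s_1]}$, so that $\Phi=X_n\cdot\Psi$ is treated as a spinor field at each spacetime point without ever needing additional regularity.
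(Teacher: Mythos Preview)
Your proposal is correct and matches the paper's approach exactly. The paper states this corollary without proof, remarking just before it that Lemma~\ref{lem1-16-july-2025} shows $T_x\Mcal$ with Clifford multiplication and the norm $|\cdot|_{\vec{s}}$ forms a normed algebra; your induction argument is precisely the unwinding of that submultiplicativity statement.
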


\begin{corollary}\label{cor1-18-aout-2025}
For any two spinor fields, one has 
\begin{equation}
\big|\la \Phi,\Psi\ra_{\ourD}\big|\lesssim |\Phi|_{\vec{n}}|\Psi|_{\vec{n}}.
\end{equation}
\end{corollary}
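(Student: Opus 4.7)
\textbf{Proof plan for Corollary~\ref{cor1-18-aout-2025}.} The strategy is to reduce this bare Dirac-form estimate to the $\vec{n}$-inserted Cauchy--Schwarz inequality of Lemma~\ref{prop2-04-mai-2025} by inserting the identity $\vec{n}\cdot\vec{n} = 1$. Indeed, the Clifford anticommutation relation \eqref{clifford-anticomm} applied to the future-oriented unit timelike vector $\vec{n}$ gives $2\,\vec{n}\cdot\vec{n} = -2\,g(\vec{n},\vec{n}) = 2$, hence $\vec{n}\cdot\vec{n} = 1$ as an element of the Clifford algebra acting on spinors.

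First I would rewrite
\begin{equation*}
\la \Phi,\Psi\ra_{\ourD} = \la \Phi, \vec{n}\cdot(\vec{n}\cdot\Psi)\ra_{\ourD},
\end{equation*}
and then apply Lemma~\ref{prop2-04-mai-2025} with $\Psi$ replaced by the spinor field $\Psi':=\vec{n}\cdot\Psi$, which yields
\begin{equation*}
\big|\la \Phi,\Psi\ra_{\ourD}\big| = \big|\la \Phi,\vec{n}\cdot\Psi'\ra_{\ourD}\big|\lesssim |\Phi|_{\vec{n}}\,|\vec{n}\cdot\Psi|_{\vec{n}}.
\end{equation*}

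It remains to control $|\vec{n}\cdot\Psi|_{\vec{n}}$ by $|\Psi|_{\vec{n}}$. Applying Corollary~\ref{cor1-18-july-2025} with the single factor $X_1=\vec{n}$ gives $|\vec{n}\cdot\Psi|_{\vec{n}}\leq |\vec{n}|_{\vec{n}}\,|\Psi|_{\vec{n}}$. The normal part of $\vec{n}$ relative to itself is $\vec{n}$ and its tangential part vanishes, so by Lemma~\ref{lem1-16-july-2025} (or directly from the definition \eqref{eq4-26-june-2025}) we have $|\vec{n}|_{\vec{n}} = |g(\vec{n},\vec{n})|^{1/2} = 1$. Combining, $|\vec{n}\cdot\Psi|_{\vec{n}}\leq |\Psi|_{\vec{n}}$, and the desired inequality follows.

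There is no real obstacle here: the result is a direct algebraic consequence of the machinery already set up, namely the Cauchy--Schwarz lemma for the $\vec{n}$-inserted form, the sub-multiplicativity of Clifford action with respect to $|\cdot|_{\vec{n}}$, and the trivial normalization of $\vec{n}$. The only point to be careful about is the sign in the Clifford square of $\vec{n}$, which is fixed by the signature convention $(-,+,+,+)$ and the anticommutation rule \eqref{clifford-anticomm}.
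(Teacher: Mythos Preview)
Your proof is correct and takes essentially the same approach as the paper: both arguments insert the identity $\vec{n}\cdot\vec{n}=1$, apply the Cauchy--Schwarz Lemma~\ref{prop2-04-mai-2025}, and finish with Corollary~\ref{cor1-18-july-2025} together with $|\vec{n}|_{\vec{n}}=1$. The only cosmetic difference is that the paper moves one factor of $\vec{n}$ to the $\Phi$-slot via the self-adjointness of Clifford multiplication, writing $\la\Phi,\Psi\ra_{\ourD} = -\la\vec{n}\cdot\Phi,\vec{n}\cdot\Psi\ra_{\ourD}$ (the sign is irrelevant under absolute values), whereas you keep both factors on the $\Psi$-side; your variant is marginally more direct since it avoids invoking that self-adjointness.
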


\begin{proof}
\bse
We write
\be
\la \Phi,\Psi\ra_{\ourD} = - \la\vec{n}\cdot\Phi,\vec{n}\cdot\Psi \ra_{\ourD}.
\ee
Then thanks to Lemma~\ref{prop2-04-mai-2025} and Corollary~\ref{cor1-18-july-2025},
\be
\big|\la\Phi,\Psi \ra_{\ourD}\big|\lesssim |\vec{n}\cdot\Phi|_{\vec{n}}|\Psi|_{\vec{n}}
\lesssim |\Phi|_{\vec{n}}|\Psi|_{\vec{n}}.
\ee
\ese
\end{proof}

%--------------------------------------------------------------------------------------

\paragraph{Norms associated with frames.}

In our forthcoming analysis, we will mostly use the norm associated with $\vec{n}$. However, occasionally, we will also use the norm associated to $\del_t$. The following estimate will be necessary. 

\begin{proposition}\label{prop1-09-aout-2025}
Assume \eqref{eq-USA-condition} with a sufficiently small $\eps_s$.  Then there exist a positive constant $K_1$, such that
\begin{equation}
\zetab|\del_t|_{\vec{n}} \leq K_1.
\end{equation}
Thus, by \eqref{eq1-06-april-2025}, one has 
\begin{equation}
 \zetab|\Psi|_{\del_t} \lesssim |\Psi |_{\vec{n}}. 
\end{equation}
\end{proposition}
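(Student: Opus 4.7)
The plan is to orthogonally decompose $\del_t$ with respect to the future-oriented unit normal $\vec n$ of $\Mcal_s$ and estimate each piece via the lapse and shift bounds established above.

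First, using $\del_t = J^{-1}\delb_0$ together with $\vec n = \lapsb^{-1}(\delb_0 - \betab^a\delb_a)$ from \eqref{eq5-12-june-2025}, I rewrite
\begin{equation*}
\del_t = J^{-1}\lapsb\,\vec n + J^{-1}\betab^a\delb_a.
\end{equation*}
Since the $\delb_a$ are tangent to $\Mcal_s$, this is precisely the orthogonal decomposition of \eqref{eq4-27-june-2025}, giving $\del_t^\perp = J^{-1}\lapsb\,\vec n$ and $\del_t^\top = J^{-1}\betab^a\delb_a$.

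Next, the normal component satisfies $\big||\del_t^\perp|_g\big| = J^{-1}\lapsb$. Claim~\ref{cor1-16-june-2025}, more precisely \eqref{eq2-06-aout-2025}, combined with the flat identity $\lapsb_\eta = J/\zeta$, yields $\lapsb \lesssim (\zeta/\zetab)\lapsb_\eta = J/\zetab$, hence $\big||\del_t^\perp|_g\big| \lesssim \zetab^{-1}$.

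For the tangential component I exploit $\betab^a = \gb_{0b}\sigmab^{ab}$ and the definition of the lapse $-\lapsb^2 = \gb_{00} - \gb_{0a}\betab^a$ (see \eqref{eq6-12-june-2025}, \eqref{eq2-18-oct-2025(l)}) to collapse the computation to
\begin{equation*}
|\del_t^\top|_{\sigmab}^2 = J^{-2}\betab^a\betab^b\gb_{ab} = J^{-2}\betab^a\gb_{0a} = J^{-2}(\gb_{00}+\lapsb^2) \leq J^{-2}\lapsb^2,
\end{equation*}
where the final bound uses $\gb_{00} = -J^2(1-H_{00}) \leq 0$ for $\eps_s$ sufficiently small. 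Invoking \eqref{eq10-aout-2025} (equivalently Lemma~\ref{lem1-07-aout-2025}), $J^{-2}\lapsb^2 \lesssim \zetab^{-2}$, so $|\del_t^\top|_{\sigmab} \lesssim \zetab^{-1}$. Combining gives $|\del_t|_{\vec n} = \big||\del_t^\perp|_g\big| + |\del_t^\top|_{\sigmab} \lesssim \zetab^{-1}$, which is the first claim. The spinorial consequence is then immediate from \eqref{eq1-06-april-2025} applied with $X=\del_t$ and $\Phi=\Psi$.

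The main conceptual point --- and really the only nontrivial ingredient --- is the lapse estimate of Claim~\ref{cor1-16-june-2025}, whose derivation relies crucially on the light-bending condition \eqref{eq1-14-june-2025}. In the near-light-cone region $\zeta$ degenerates, and it is only the strict negativity of $H^{\Ncal 00}$ that keeps $\zetab$ bounded below and hence forces $\lapsb \sim J/\zetab$; without this one would obtain only the strictly weaker bound $|\del_t|_{\vec n} \lesssim \zeta^{-1}$, which would be insufficient for the weighted energy estimates in the Euclidean-hyperboloidal framework. All other steps are direct algebra with the frame identities of Section~\ref{section===71}.
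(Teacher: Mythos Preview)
Your proof is correct and somewhat cleaner than the paper's. Both arguments start from the same orthogonal decomposition $\del_t = J^{-1}\lapsb\,\vec n + J^{-1}\betab^a\delb_a$ and handle the normal part identically via Claim~\ref{cor1-16-june-2025}. The difference lies in the tangential piece: the paper expands $\betab^a$ explicitly using \eqref{eq1-13-aout-2025} and \eqref{eq11-03-aout-2025}, splitting $J^{-1}\betab^a\delb_a$ into a $\delb_r$-part and an $H$-perturbation part, then invokes Lemma~\ref{lem1-14-june-2025} to control $|\delb_r|_{\sigmab}\lesssim\zetab$ and the smallness $|H|\leq\eps_s\zeta$ for the remainder. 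You instead use the algebraic identity $\betab^a\betab^b\gb_{ab}=\betab^a\gb_{0a}=\gb_{00}+\lapsb^2$ (from $\sigmab^{bc}\gb_{ab}=\delta^c_a$ and the lapse definition) to collapse the tangential norm directly to $J^{-2}\lapsb^2$, sidestepping the need for any separate $\delb_r$ estimate. Your route is shorter and relies on strictly fewer auxiliary results; the paper's route has the minor advantage of making the individual geometric contributions (radial versus $H$-perturbation) visible, which is sometimes useful elsewhere in the analysis.
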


\begin{proof}
\bse
We first establish the following estimates. There are positive constants $K_0,K_1$, such that
\begin{equation}
K_0\zeta\leq|\delb_a|_{\vec{n}}\leq K_1.
\end{equation}
To see this, we  note that
\be
|\delb_a|_{\vec{n}}^2 = |\delb_a|_{\sigmab}^2 = \gb_{aa} = \bar{\eta}_{aa} + \Hb_{aa} 
= 1- \Big(\frac{x^a}{r}\Big)^2(\delb_rT)^2 + \Hb_{aa}.
\ee
Then when $|H|\lesssim \zeta\eps_s$ with $\eps_s$ sufficiently small, \eqref{eq3-14-july-2025} leads us to the estimate on $|\delb_a|_{\vec{n}}$ in \eqref{eq2-17-july-2025}.

We then note the identity
\begin{equation}\label{eq2-07-aout-2025}
\aligned
\del_t =J^{-1}\lapsb\vec{n} + J^{-1}\betab^b\delb_b 
& = \big(\zeta^2 - H^{\Ncal00} - \zeta^2R[H]\big)^{-1/2}\vec{n} 
- \frac{\delb_rT}{\zeta^2 - H^{\Ncal00} - \zeta^2 R[H]}\delb_r
\\
& \quad- \frac{-H^{b0} + (x^c/r)\delb_rT H^{cb}}{\zeta^2 - H^{\Ncal00} - \zeta^2 R[H]}\delb_b,
\endaligned
\end{equation}
where we have applied \eqref{eq4-23-july-2025} and \eqref{eq1-13-aout-2025} and \eqref{eq11-03-aout-2025}.
\ese
\bse
Based on \eqref{eq1-07-aout-2025}, the first term in the right-hand side of \eqref{eq2-07-aout-2025} is bounded as expected. For the last term, we remark that, thanks to \eqref{eq2-14-june-2025},
\be
\Big|\frac{-H^{b0} + (x^c/r)\delb_rT H^{cb}}{\zeta^2 - H^{\Ncal00} - \zeta^2 R[H]}\delb_b\Big|_{\vec{n}}\leq \frac{\zeta}{\zeta^2+|H^{\Ncal00}|}\eps_s\leq (\zeta^2+|H^{\Ncal00}|)^{-1/2}\eps_s. 
\ee
For the second term in the right-hand side of \eqref{eq2-07-aout-2025}, we recall Lemma~\ref{lem1-14-june-2025}
\be
\big|(\delb_r,\delb_r)_g - \big(\zeta^2 - H^{\Ncal00}\big)\big|\lesssim \zeta^2|H| + |H|^2
\lesssim \zeta^2\eps_s\lesssim (\zeta^2+|H^{\Ncal00}|)\eps_s.
\ee
Thus we obtain
\be
\Big|\frac{\delb_rT}{\zeta^2 - H^{\Ncal00} - \zeta^2 R[H]}\delb_r\Big|_{\vec{n}}
\lesssim(\zeta^2+|H^{\Ncal00}|)^{-1/2}.
\ee
Then we arrive at the expected estimate by sum up the above results.
\ese
\end{proof}

%--------------------------------------------------------------------

\begin{lemma}\label{lem1-23-july-2025}
Assume that \eqref{eq-USA-condition} holds for some sufficiently small $\eps_s$. Then one has 
\begin{equation}\label{eq2-17-july-2025}
K_0J\zetab^{-1}\leq |\delb_0|_{\vec{n}}\leq K_1J\zetab^{-1},
\quad
K_0\zeta\leq|\delb_a|_{\vec{n}}\leq K_1.
\end{equation}
\end{lemma}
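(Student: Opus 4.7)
The plan is to prove the two estimates separately, noting that the bound on $|\delb_a|_{\vec{n}}$ for $a=1,2,3$ has in fact already been derived in the course of proving Proposition \ref{prop1-09-aout-2025}: since each $\delb_a$ is tangent to $\Mcal_s$ by construction, its normal component with respect to $\vec{n}$ vanishes and $|\delb_a|_{\vec{n}} = |\delb_a|_{\sigmab}$, with $|\delb_a|^2_{\sigmab} = \gb_{aa} = 1 - (x^a/r)^2(\delb_rT)^2 + \Hb_{aa}$. The flat part ranges in $[\zeta^2, 1]$ (the lower value being attained when $(x^a/r)^2=1$), while $|\Hb_{aa}| \lesssim |H|$ by \eqref{eq3-14-july-2025} is absorbable under \eqref{eq-USA-condition} for $\eps_s$ sufficiently small, yielding $K_0\zeta \leq |\delb_a|_{\vec{n}} \leq K_1$.

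The substantive content is therefore the estimate on $|\delb_0|_{\vec{n}}$, which I attack via the lapse--shift orthogonal decomposition. Starting from \eqref{eq5-12-june-2025}, I write $\delb_0 = \lapsb\vec{n} + \betab^b\delb_b$; since the $\delb_b$ are tangent to $\Mcal_s$ while $\vec{n}$ is its future-oriented unit normal, this is exactly the $\vec{n}$-orthogonal decomposition of $\delb_0$, so $\delb_0^\perp = \lapsb\vec{n}$ and $\delb_0^\top = \betab^b\delb_b$. Consequently $|\delb_0^\perp|_g = \lapsb$, and combining the identity $\gb^{00} = -\lapsb^{-2}$ from Lemma \ref{lem1-12-june-2025} with the refined bound \eqref{eq10-aout-2025} produces the two-sided equivalence $\lapsb \approx J\zetab^{-1}$, giving immediately both the upper and lower bounds on the perpendicular component.

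For the tangential piece, I use the orthogonality of the splitting to derive the pointwise identity $|\delb_0^\top|^2_{\sigmab} = (\delb_0,\delb_0)_g + \lapsb^2 = \gb_{00} + \lapsb^2$. From \eqref{eq4-14-july-2025} one has $|\gb_{00}| \lesssim J^2$, and since $\lapsb^2 \lesssim J^2\zetab^{-2}$ together with $\zetab \lesssim 1$ (as $\zeta \leq 1$ and $|H^{\Ncal00}|$ is controlled by $\eps_s$), I conclude $|\delb_0^\top|_{\sigmab} \lesssim J\zetab^{-1}$. Summing the two contributions yields the upper bound $|\delb_0|_{\vec{n}} = |\delb_0^\perp|_g + |\delb_0^\top|_{\sigmab} \lesssim J\zetab^{-1}$, while the trivial inequality $|\delb_0|_{\vec{n}} \geq |\delb_0^\perp|_g = \lapsb \gtrsim J\zetab^{-1}$ delivers the matching lower bound. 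The one point requiring care is the uniform two-sided control $\lapsb \approx J\zetab^{-1}$ across the transition region where the Euclidean weight $\zeta$ degenerates; this is precisely where the light-bending hypothesis $H^{\Ncal00}<0$ of \eqref{eq-USA-condition} plays its role, via the structural identity \eqref{eq1-08-aout-2025} together with Lemma \ref{lem1-07-aout-2025}.
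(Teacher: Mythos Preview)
Your proof is correct and follows essentially the same architecture as the paper's: both treat the $|\delb_a|_{\vec{n}}$ bound as already established in the proof of Proposition~\ref{prop1-09-aout-2025}, and both obtain the lower bound $|\delb_0|_{\vec{n}} \geq \lapsb \gtrsim J\zetab^{-1}$ from the lapse--shift decomposition $\delb_0 = \lapsb\vec{n} + \betab^b\delb_b$ together with the two-sided control on $\lapsb$ (the paper cites Claim~\ref{cor1-16-june-2025}, you cite the equivalent~\eqref{eq10-aout-2025}).

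The one genuine difference lies in the upper bound on $|\delb_0|_{\vec{n}}$. The paper simply writes $\delb_0 = J\del_t$ and invokes Proposition~\ref{prop1-09-aout-2025}, which already gives $|\del_t|_{\vec{n}} \lesssim \zetab^{-1}$, so $|\delb_0|_{\vec{n}} = J|\del_t|_{\vec{n}} \lesssim J\zetab^{-1}$ in one line. You instead stay within the orthogonal decomposition and compute the tangential piece directly via the identity $|\delb_0^\top|^2_{\sigmab} = \gb_{00} + \lapsb^2$, then bound each term. Your route is slightly more self-contained (it does not appeal to the somewhat involved computation inside Proposition~\ref{prop1-09-aout-2025}), while the paper's is shorter because that work has already been done. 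Both are perfectly valid.
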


\begin{proof}
\bse
We only need to estimate $|\delb_0|_{\vec{n}}$, which is direct the from the decomposition
\be
\delb_0 = \lapsb\vec{n} + \betab^a\delb_a, 
\ee
from which we see that, thanks to the fact that $\delb_a\perp\vec{n}$, 
\be
|\delb_0|_{\vec{n}}\geq \lapsb \geq K_0J\zetab^{-1}, 
\ee
due to Claim~\ref{cor1-16-june-2025}. The other half of control is direct form Proposition~\ref{prop1-09-aout-2025}.
\ese
\end{proof}

We now establish the following estimate.

\begin{lemma}\label{lem2-13-july-2025}
Assume that \eqref{eq-USA-condition} holds for a sufficiently small $\eps_s$.  
There exist constants $K_0<K_1$ such that, for any $X = X^{\alpha}\del_{\alpha}$, 
\begin{equation}\label{eq8-14-july-2025}
|X|_{\vec{n}}\lesssim K_1\zetab^{-1}\max_{\alpha}|X^{\alpha}|.
\end{equation}
\end{lemma}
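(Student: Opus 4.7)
My strategy is to pass from the coordinate frame $\{\del_{\alpha}\}$ to the adapted frame $\{\delb_{\alpha}\}$, on which the norm $|\cdot|_{\vec n}$ has already been controlled by Lemma~\ref{lem1-23-july-2025}, and then conclude with the triangle inequality for the orthogonal-decomposition norm defined in \eqref{eq4-26-june-2025}. The key point is that the ``large'' factor $J$ that appears when $\del_t$ is expressed in terms of $\delb_0$ is exactly compensated by the factor $J^{-1}$ coming from the transition matrix, leaving only the $\zetab^{-1}$ loss.

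Reading off the transition matrix $\Psib$ in \eqref{eq1-13-june-2025}, one has $\del_t = J^{-1}\delb_0$ and $\del_a = -(x^a/r)\,J^{-1}\,\delb_rT\,\delb_0 + \delb_a$, so a general $X = X^{\alpha}\del_{\alpha}$ decomposes in the adapted frame as
\[
X \;=\; \Bigl(X^0 - X^a\,\tfrac{x^a}{r}\,\delb_rT\Bigr)\,J^{-1}\delb_0 \;+\; X^a\delb_a .
\]
Since $Y\mapsto(Y^{\perp},Y^{\top})$ is linear in $Y$, the norm $|\cdot|_{\vec n}$ satisfies the triangle inequality. Combining this with the bounds $J^{-1}|\delb_0|_{\vec n}\lesssim \zetab^{-1}$ and $|\delb_a|_{\vec n}\lesssim 1$ from Lemma~\ref{lem1-23-july-2025}, together with the trivial estimates $|\delb_rT|\leq 1$ (from \eqref{eq1-19-march-2025}) and $|x^a/r|\leq 1$, I obtain
\[
|X|_{\vec n} \;\lesssim\; \zetab^{-1}\,\max_{\alpha}|X^{\alpha}| \;+\; \max_a |X^a| .
\]

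To close the estimate I would observe that, under the uniform spacelike condition \eqref{eq-USA-condition} with $\eps_s$ small, $\zetab^{2}=\zeta^{2}+|H^{\Ncal 00}|\leq 1+\eps_s$, so $\zetab^{-1}$ is bounded below by a positive universal constant; this lets me absorb the second term into the first and reach the claimed inequality. No genuine obstacle arises: the argument is essentially frame-change bookkeeping combined with the already-established frame bounds of Lemma~\ref{lem1-23-july-2025}. The only point worth being explicit about is the universal \emph{upper} bound on $\zetab$, which is immediate from $\zeta\leq 1$ together with the smallness of $H$ enforced by \eqref{eq2-14-june-2025}.
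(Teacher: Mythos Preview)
Your proof is correct and follows essentially the same route as the paper. The paper bounds $|\del_t|_{\vec n}\lesssim\zetab^{-1}$ via Proposition~\ref{prop1-09-aout-2025}, then writes $\del_a=-(x^a/r)\delb_rT\,\del_t+\delb_a$ to get $|\del_a|_{\vec n}\lesssim\zetab^{-1}$ directly (using $|\delb_a|_{\vec n}\lesssim 1\lesssim\zetab^{-1}$), and finishes with the triangle inequality; you instead expand $X$ in the adapted frame first and absorb the $|\delb_a|_{\vec n}\lesssim 1$ contribution at the end via $\zetab\lesssim 1$, which amounts to the same estimate in a different order.
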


\begin{proof}
Given any $X = X^{\alpha}\del_{\alpha}$, we observe that from Proposition~\ref{prop1-09-aout-2025}, Lemma~\ref{lem1-23-july-2025}, and the relation $\del_a = -(x^a/r)\delb_rT \del_t + \delb_a$,
\be
|\del_t|_{\vec{n}}\lesssim \zetab^{-1},
\quad
|\del_a|_{\vec{n}}\leq |\del_t|_{\vec{n}} + \sqrt{\gb_{aa}}\lesssim \zetab^{-1}. 
\ee
Then we find 
\begin{equation}\label{eq4-13-july-2025}
|X|_{\vec{n}}\leq |X^0\del_t|_{\vec{n}} + |X^a\del_a|_{\vec{n}}\lesssim \max_{\alpha}|X^{\alpha}|\zetab^{-1}.
\end{equation}
This establishes the inequality in \eqref{eq8-14-july-2025}.
\end{proof}

%---------------------------------------------

Once the norm is defined for vectors, we generalize its definition to tensors through the standard way. First, for the co-vectors (one forms), we define the norm by duality, i.e.,
\begin{equation}
|\omega|_{\vec{s}} := \sup_{|X|_{\vec{s}}=1}\{|\omega(X)|\}.
\end{equation}
In addition, for a general $(r,s)$ type tensor $T$, we define
\be
% \mycolor
|T|_{\vec{s}} : = \sup_{|X_k|_{\vec{s}}=1\atop|\theta_j|_{\vec{s}} = 1} \{|T(\theta_1,\theta_2,\cdots,\theta_r,X_1,X_2,\cdots,X_s)|\}.
\ee
It is easily checked that $|\cdot|_{\vec{n}}$ is a norm. We then point out the following properties, whose proof is immediate from our definitions. 

\begin{lemma}
For any tensors $T,U$ of type $(r_1,s_1)$ and $(r_2,s_2)$, one has 
\begin{equation}\label{eq5-04-july-2025}
|T\otimes U|_{\vec{s}}\leq |T|_{\vec{s}}|U|_{\vec{s}},
\end{equation}
\begin{equation}\label{eq6-04-july-2025}
|\mathrm{tr}_i^j\, T|_{\vec{s}}\leq 4|T|_{\vec{s}}.
\end{equation}
\end{lemma}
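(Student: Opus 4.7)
The plan is to treat the two inequalities separately. Both follow formally from the definitions of the norm $|\cdot|_{\vec s}$ on vectors, one-forms, and general tensors, once an appropriate frame has been chosen; no geometric input beyond the decomposition $X=X^\perp_{\vec s}+X^\top_{\vec s}$ is needed.

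\textbf{Step 1: tensor product inequality.} For tensors $T$ and $U$ of types $(r_1,s_1)$ and $(r_2,s_2)$, the tensor product satisfies, by definition,
$$
(T\otimes U)(\theta_1,\ldots,\theta_{r_1+r_2},X_1,\ldots,X_{s_1+s_2})
= T(\theta_1,\ldots,\theta_{r_1},X_1,\ldots,X_{s_1})\,U(\theta_{r_1+1},\ldots,X_{s_1+s_2}).
$$
Taking $|\theta_j|_{\vec s}=1$ and $|X_k|_{\vec s}=1$ and bounding each factor by the defining supremum of $|T|_{\vec s}$ and $|U|_{\vec s}$ gives the result after taking the supremum on the left-hand side.

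\textbf{Step 2: setup for the trace inequality.} Fix a point and choose an orthonormal frame $\{e_0,e_1,e_2,e_3\}$ with $e_0=\vec s$, and let $\{\theta^0,\theta^1,\theta^2,\theta^3\}$ denote its dual co-frame. The key observation is that each basis element has unit $\vec s$-norm. For $e_0=\vec s$ we have $(e_0)^\perp_{\vec s}=e_0$ and $(e_0)^\top_{\vec s}=0$, so $|e_0|_{\vec s}=||e_0|_g|=1$; for $a\in\{1,2,3\}$ the vector $e_a$ is $g$-orthogonal to $\vec s$, so $(e_a)^\perp_{\vec s}=0$ and $(e_a)^\top_{\vec s}=e_a$, giving $|e_a|_{\vec s}=|e_a|_{\sigmab}=1$. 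For the dual basis, the bound $|\theta^\mu|_{\vec s}\le 1$ will follow because, for any $X$ with $|X|_{\vec s}\le 1$, the components $X^\mu=\theta^\mu(X)$ of $X$ in this frame satisfy $|X^0|=||X^\perp|_g|\le 1$ and $|X^a|\le |X^\top|_{\sigmab}\le 1$; applying $\theta^\mu$ to $e_\mu$ shows this bound is attained, so $|\theta^\mu|_{\vec s}=1$.

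\textbf{Step 3: trace inequality.} Expanding the contraction in this frame,
\begin{equation*}
(\mathrm{tr}_i^j T)(\theta_1,\ldots,\theta_{r-1},X_1,\ldots,X_{s-1})
=\sum_{\mu=0}^{3} T(\theta_1,\ldots,\underbrace{\theta^\mu}_{j\text{-th}},\ldots,\theta_{r-1},X_1,\ldots,\underbrace{e_\mu}_{i\text{-th}},\ldots,X_{s-1}),
\end{equation*}
we insert arguments with $|\theta_\ell|_{\vec s}=|X_k|_{\vec s}=1$ together with the unit-norm basis elements $e_\mu$, $\theta^\mu$ from Step~2. Each summand is then bounded by $|T|_{\vec s}$ via the defining supremum, and the sum contains exactly $4$ terms. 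Taking the supremum on the left gives $|\mathrm{tr}_i^j T|_{\vec s}\le 4\,|T|_{\vec s}$.

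\textbf{Expected obstacle.} The proof is essentially bookkeeping; the only subtle point is the verification in Step~2 that basis elements and dual basis elements all have unit $\vec s$-norm, which relies crucially on the normalization built into \eqref{eq4-26-june-2025} (the sum of the $g$-norm of the normal part and the $\sigmab$-norm of the tangential part). Once this is in place, Step~3 is a direct computation, and the numerical factor $4$ reflects the spacetime dimension $3+1$.
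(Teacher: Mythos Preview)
Your proof is correct. The paper itself provides no argument for this lemma beyond stating that its ``proof is immediate from our definitions'', and your Steps~1--3 are precisely the natural way to make that claim explicit: the tensor product bound follows straight from the supremum definition, and the trace bound follows by expanding in an orthonormal frame adapted to $\vec s$ once you have verified that all frame and coframe elements have unit $\vec s$-norm.
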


 The above properties can also be written in components. For example, if $T$ is a type $(2,0)$ tensor and $X$ is a vector, we have 
\begin{equation}
|T_{\alpha\beta}X^{\alpha}\diff x^{\beta}|_{\vec{n}}\lesssim |T|_{\vec{n}}|X|_{\vec{n}}.
\end{equation}

}

%-----------------------------------------------------------------------------------------------------------------------------

\subsection{ Weighted energy estimates in the merging-Euclidean domain}

{ 
We pick a weight function
\begin{equation}\label{equa-weight1} 
\omega := \aleph(r-t), 
\end{equation}
where $\aleph$ is a smooth and non-decreasing function, satisfying
\be
\aligned
& \aleph(y) = 0 \quad\text{for}\quad y \leq -2 \quad\text{and}\quad \aleph(y) = y+2 \quad \text{for}\quad y \geq -1,
\\
& \aleph'(y)>0 \quad \text{for}\quad y\geq -1.
\endaligned
\ee
Observe that this weight satisfies
\be
\omega \simeq 
\begin{cases}
0, \quad & \quad \Mcal^{\Hcal}_{[s_0,s_1]},
\\
(2+r-t),\quad  & \quad\Mcal^{\ME}_{[s_0,s_1]}.
\end{cases}
\ee
We also need the energy density 
\bse\label{equa-28-sept-2025a}
\be
\ebf_{\kappa}[\Psi] := (1+\omega^{2\kappa})\zetab \, |\Psi|_{\vec{n}}^2 
= (1+\omega^{2\kappa})\zetab\la \Psi,\vec{n}\cdot\Psi\ra_{\ourD} 
\ee
and the associated energy functional 
\begin{equation}\label{eq5-06-oct-2025}
\Ebf_{\kappa}(s,\Psi) := \int_{\Mcal_s}\ebf_{\kappa}[\Psi] \, \diff x.  
\end{equation}
\ese
Thanks to Claim~\ref{lem1-02-march-2025}, it is clear that $\Ebf_{\kappa}(s,\Psi)$ is equivalent to $\Ebf_{g,w}(s,\Psi)$ introduced earlier in~\eqref{eq4-19-aout-2025}. In~\eqref{equa-28-sept-2025a}, we have used the weight \eqref{equa-weight1}.

For future reference, we establish first the following result.

\begin{lemma}[Timelike property of the weight]
\label{lem1-18-aout-2025}
Assume that \eqref{eq-USA-condition} holds for a sufficiently small $\eps_s$, and
\begin{equation}\label{eq2-09-oct-2025(l)}
H^{\Ncal00} < 0\quad \text{in }\{r\geq 3t/4\}\cap \Mcal_{[s_0,s_1]}.
\end{equation}  
Then $\mathrm{grad}(r-t)$ is future-oriented timelike, thus
\begin{equation}
\la \Psi,\mathrm{grad}\big(\omega^{2\kappa}\big)\cdot\Psi\ra_{\ourD}\geq 0.
\end{equation}
\end{lemma}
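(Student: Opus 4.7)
The plan is to reduce the claim to two ingredients already in place: (i) the vector field $\mathrm{grad}(r-t)$ is future-oriented timelike on the support of $\omega$, and (ii) the pointwise Dirac form $\la \Psi, V\cdot \Psi\ra_{\ourD}$ is non-negative for any future-oriented timelike $V$, which follows from the calculation leading to \eqref{eq2-02-mai-2025} (namely, in a tetrad with $e_0=V/|V|_g$ one has $\la \Psi, V\cdot\Psi\ra_{\ourD} = |V|_g\,(\gamma_0\psi)^{\dagger}(\gamma_0\psi)\geq 0$).

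\textbf{Step 1: the metric length of $\mathrm{grad}(r-t)$.} Since $\del_t(r-t)=-1$ and $\del_a(r-t)=x^a/r$, we compute
\begin{equation*}
g(\mathrm{grad}(r-t),\mathrm{grad}(r-t)) = g^{\alpha\beta}\del_{\alpha}(r-t)\del_{\beta}(r-t) = g^{00} - 2\tfrac{x^a}{r}g^{0a} + \tfrac{x^ax^b}{r^2}g^{ab}.
\end{equation*}
By the second identity in \eqref{eq1-14-july-2025}, applied to $S^{\alpha\beta}=g^{\alpha\beta}$, this is precisely $g^{\Ncal 00}$. A direct computation gives $\eta^{\Ncal 00}= -1 - 2\cdot 0 + |x|^2/r^2 = 0$, so
\begin{equation*}
g(\mathrm{grad}(r-t),\mathrm{grad}(r-t)) = H^{\Ncal 00}.
\end{equation*}

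\textbf{Step 2: support and sign.} The function $\omega=\aleph(r-t)$ vanishes identically in $\{r-t\leq -2\}$, so $\omega^{2\kappa}$ is supported in $\{r\geq t-2\}$. For $s_0$ sufficiently large (which we assume throughout) this support lies in $\{r\geq 3t/4\}\cap \Mcal_{[s_0,s_1]}$, where hypothesis \eqref{eq2-09-oct-2025(l)} gives $H^{\Ncal 00}<0$. Hence $\mathrm{grad}(r-t)$ is timelike on the support of $\omega^{2\kappa}$.

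\textbf{Step 3: future-orientation.} In Minkowski one has $\mathrm{grad}_{\eta}(r-t) = \del_t + (x^a/r)\del_a$, which is future-oriented null. For the perturbed metric, the smallness condition \eqref{eq2-14-june-2025} makes the $\del_t$-component
\begin{equation*}
-g^{00} + \tfrac{x^a}{r}g^{0a} = 1 - H^{00} + \tfrac{x^a}{r}H^{0a}
\end{equation*}
strictly positive. Combined with the timelike character from Step~2, this yields that $\mathrm{grad}(r-t)$ is future-oriented on the support of $\omega^{2\kappa}$.

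\textbf{Step 4: conclusion.} Writing
\begin{equation*}
\mathrm{grad}(\omega^{2\kappa}) = 2\kappa\,\omega^{2\kappa-1}\,\aleph'(r-t)\,\mathrm{grad}(r-t),
\end{equation*}
the scalar prefactor is non-negative (with $\kappa\in(1/2,3/4)$ the exponent $2\kappa-1>0$ is harmless, and $\aleph'\geq 0$ by construction). On the support of $\omega^{2\kappa}$ the gradient $\mathrm{grad}(r-t)$ is future-oriented timelike by Steps~2--3, while off this support the expression vanishes. Ingredient~(ii) above then gives
\begin{equation*}
\la \Psi,\mathrm{grad}(\omega^{2\kappa})\cdot\Psi\ra_{\ourD} = 2\kappa\,\omega^{2\kappa-1}\,\aleph'(r-t)\,\la \Psi,\mathrm{grad}(r-t)\cdot\Psi\ra_{\ourD} \geq 0.
\end{equation*}

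\textbf{Main difficulty.} The only non-trivial point is Step~3: verifying future-orientation in the curved case. The identity of Step~1 makes the timelike property automatic from the hypothesis, but future-orientation is a separate sign check that requires combining the Minkowski reference configuration with the smallness of $H$; one must also confirm that the support of $\omega$ is contained in the region where $H^{\Ncal 00}<0$ is assumed, which uses the choice of $s_0$ and the cutoff profile of $\aleph$.
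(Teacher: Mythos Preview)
Your proof is correct and follows essentially the same approach as the paper's: both compute $g(\mathrm{grad}(r-t),\mathrm{grad}(r-t)) = H^{\Ncal 00}$, invoke the hypothesis for the timelike property, verify future-orientation via positivity of the $\del_t$-component $1 - H^{00} + (x^a/r)H^{0a}$, factor $\mathrm{grad}(\omega^{2\kappa}) = 2\kappa\,\omega^{2\kappa-1}\aleph'(r-t)\,\mathrm{grad}(r-t)$, and conclude from the non-negativity of $\la \Psi, V\cdot\Psi\ra_{\ourD}$ for future-timelike $V$. Your Step~2 is actually more explicit than the paper about the containment of $\mathrm{supp}\,\omega$ in $\{r\geq 3t/4\}$; the paper leaves this implicit, and your handling via ``$s_0$ sufficiently large'' is one legitimate way to close it (the paper elsewhere fixes $s_0=2$, but the lemma is invoked only in the $\ME$ region where the containment is automatic once $s\geq\sqrt{7}$, so the residual strip is harmless).
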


\begin{proof} 
\bse
A direct calculation leads us to
\be
\aligned
\big(g^{\alpha\beta}\del_{\alpha}(t-r)\del_{\beta},g^{\alpha'\beta'}\del_{\alpha'}(t-r)\del_{\beta'}\big)_g 
& =  g^{\alpha\alpha'}\del_{\alpha}(t-r)\del_{\alpha'}(t-r) 
= \big(\diff(t-r),\diff(t-r)\big)_g
\\
& =  H^{\Ncal00}
\endaligned
\ee
and, therefore,
\be
\mathrm{grad}\big(\omega^{2\kappa}\big)
= 2\kappa\omega^{2\kappa-1}\aleph'(r-t)g^{\alpha\beta}\del_{\alpha}(r-t)\del_{\beta}
= 2\kappa\omega^{2\kappa-1}\aleph'(r-t)\mathrm{grad}(r-t). 
\ee
Consequently, we have 
\be
\big(\mathrm{grad}\big(\omega^{2\kappa}\big),\mathrm{grad}\big(\omega^{2\kappa}\big)\big)_g<0,\quad \text{ in }  \{r\geq 3t/4\}\cap\Mcal_{[s_0,s_1]}.
\ee
Thus $\mathrm{grad}\big(\omega^{2\kappa}\big)$ is timelike. 
\ese
\bse
On the other hand, we have 
\be
\aligned
\mathrm{grad}(\omega^{2\kappa}) 
& =  2\kappa\omega^{2\kappa-1}\aleph'(r-t)g^{\alpha\beta}\del_{\alpha}(r-t)\del_{\beta}
\\
& =  2\kappa\omega^{2\kappa-1}\aleph'(r-t)\eta^{\alpha\beta}\del_{\alpha}(r-t)\del_{\beta} 
+ 2\kappa\omega^{2\kappa-1}\aleph'(r-t)H^{\alpha\beta}\del_{\alpha}(r-t)\del_{\beta}.
\endaligned
\ee
Provided that~\eqref{eq2-14-june-2025} holds with a sufficiently small $\eps_s$ (which is implied by our assumptions~\eqref{eq-US-condition}- \eqref{eq-bending-condition}) we have 
\be
(1-H^{00} + (x^a/r)H^{a0}\big)>0, 
\ee
thus $\mathrm{grad}\big(\omega^{2\kappa}\big)$ is \textit{future oriented}. Then it follows that 
\be
\la \Psi,\mathrm{grad}\big(\omega^{2\kappa}\big)\cdot\Psi\ra_{\ourD}\geq0.
\ee
To see this, let us fix a global orthonormal frame $\{e_i\}$ satisfying 
\be
e_0 := \frac{\mathrm{grad}\big(\omega^{2\kappa}\big)}{\big||\mathrm{grad}\big(\omega^{2\kappa}\big)|_g\big|},
\ee
and let $\psi$ be the coordinate of $\Psi$ in this frame. Then we find
\be
\la\Psi,\mathrm{grad}\big(\omega^{2\kappa}\big)\cdot\Psi \ra_{\ourD} 
= \big||\mathrm{grad}\big(\omega^{2\kappa}\big)|_g\big|\psi^{\dag}\gamma_0^{\dag}\gamma_0\psi\geq 0.
\qedhere 
\ee
\ese
\end{proof}

%--------------------------------------------------------------------------------------------

We treat here a setup that is closely linked to the setup introduced in~\cite{PLF-YM-PDE}, i.e., $\kappa\in(1/2,1)$ and $\mu\in(3/4,1)$. In this framework, we need to distinguish between the hyperboloidal domain $\Mcal^{\Hcal}_{[s_0,s_1]}$ and the merging-Euclidean domain $\Mcal^{\EM}_{[s_0,s_1]}$. We thus define
\begin{equation}\label{eq8-04-oct-2025}
\Ebf^{\Hcal,p,k}(s,\Psi) = \int_{\Mcal^{\Hcal}_s}\ebf_{\kappa}^{p,k}[\Psi]\diff x, 
\quad
\Ebf^{\ME,p,k}_{\kappa}(s,\Psi) = \int_{\Mcal^{\ME}_s}\ebf_{\kappa}^{p,k}[\Psi]\diff x.
\end{equation}
Let us point out that $\Ebf^{\Hcal}(s,\Psi)$ remains equivalent when the index of weight $\kappa$ change. So we do not need to specify it.

We also need to consider the integration of the energy flux 
\be
V[\Psi] = \la\Psi,g^{\alpha\beta}\del_{\alpha}\cdot\Psi\ra_{\ourD}\del_{\beta}
\ee
along the conical boundary between $\Mcal^{\Hcal}_{[s_0,s_1]}$ and $\Mcal^{\ME}_{[s_0,s_1]}$ :
\begin{equation}
\Ccal_{[s_0,s_1]} = \Big\{(t,x) \, \big| \, r=t-1, \frac{(s_0^1+1)^{1/2}}{2}\leq t\leq \frac{(s_1^2+1)^{1/2}}{2}\Big\}.
\end{equation}
In this context, by bootstrap we will show that the above hypersurface is spacelike, so that we can decouple the analysis in $\Mcal^{\ME}_{[s_0,s_1]}$ from the global analysis. More precisely, we have the following property. 

\begin{proposition}\label{prop1-05-oct-2025}
Assume that \eqref{eq-USA-condition} holds with a sufficiently small $\eps_s$, and
\begin{equation}
H^{\Ncal00} < 0\quad \text{in }\{r\geq 3t/4\}\cap \Mcal_{[s_0,s_1]}.
\end{equation}  
When $\Ccal_{[s_0,s_1]}$ is spacelike, the following energy estimates hold for any solution to the equation
\begin{equation}
\opDirac\Psi + \mathrm{i}M\Psi = \Phi
\end{equation}
with sufficient decay rate at spatial infinity: 
\begin{equation}\label{eq4-10-oct-2025(l)}
\aligned
& \Ebf_{\kappa}^{\ME}(s_1,u) - \Ebf_{g,\kappa}^{\ME}(s_0,u)
+\int_{s_0}^{s_1}s\int_{\Mcal^{\ME}_s}
\frac{\big(\omega^{\kappa}\zeta|\Psi|_{\vec{\gamma}}\big)^2}{\la r-t\ra}\diff x\diff s
+\Ebf^{\Ccal}(s_0,s_1;\Psi)
\\
& \lesssim \int_{s_0}^{s_1}s\int_{\Mcal_s}\omega^{2\kappa}\zeta^2\big|\la \Phi, \Psi \ra_{\ourD}\big|\, \diff x\diff s  
\lesssim\int_{s_0}^{s_1}s\Ebf_{\kappa}^{\ME}(s,\Psi)^{1/2}
\|\omega^{\kappa}\zeta^2\zetab^{-1/2}|\Phi|_{\vec{n}}\|_{L^2(\Mcal^{\ME}_s)}.
\endaligned
\end{equation}
where $\vec{\gamma} := \mathrm{grad}(r-t)$, $|\Psi|_{\vec{\gamma}}^2 = \la\Psi,\vec{\gamma}\cdot\Psi\ra_{\ourD}$, and
\begin{equation}
\Ebf_{\kappa}^{\ME}(s,\Psi) := \int_{\Mcal^{\ME}_s}\ebf_{\kappa}[\Psi]\diff x,
\quad
\Ebf^{\Ccal}(s_0,s_1;\Psi) := \int_{\Ccal_{[s_0,s_1]}}
\la \Psi,\vec{n}_{\Ccal}\cdot\Psi\ra_{\ourD}\mathrm{Vol}_{\Ccal}\geq 0,
\end{equation}
and $\vec{n}_{\Ccal}$ is the future-oriented unit normal vector of $\Ccal_{[s_0,s_1]}$, while $\mathrm{Vol}_{\Ccal}$ is the associated volume form. Furthermore, one has 
\begin{equation}\label{eq2-10-oct-2025}
\aligned
& \Ebf_{\kappa}^{\ME}(s_1,t^{-1}\Psi) - \Ebf_{\kappa}^{\ME}(s_0,t^{-1}\Psi) + \Ebf^{\Ccal}(s_0,s_1;t^{-1}\Psi) 
\\
& \quad +
\int_{s_0}^{s_1}s\int_{\Mcal^{\ME}_s}t^{-1}\big(\omega^{\kappa}\zeta|t^{-1}\Psi|_{\del_t}\big)^2
+ 
\la r-t \ra^{-1}\big(\omega^{\kappa}\zeta|t^{-1}\Psi|_{\vec{\gamma}}^2\big)^2\,\diff x\diff s
\\
& \lesssim \int_{s_0}^{s_1}s\Ebf_{\kappa}^{\ME}(s,t^{-1}\Psi)^{1/2}
\|t^{-1}\omega^{\kappa}\zeta^2\zetab^{-1/2}|\Phi|_{\vec{n}}\|_{L^2(\Mcal^{\ME}_s)}\diff s
\\
& \quad+\int_{s_0}^{s_1}s\|t^{-1}|H|\|_{L^{\infty}(\Mcal^{\ME}_s)}\Ebf_{\kappa}^{\ME}(s,t^{-1}\Psi) \,\diff s.
\qedhere 
\endaligned
\end{equation}
\end{proposition}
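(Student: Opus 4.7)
The strategy is to apply the weighted energy identity~\eqref{eq1a-05-april-2025} of Proposition~\ref{prop1-04-april-2025} on the merging-Euclidean slab $\Mcal^{\ME}_{[s_0,s_1]}$ with the scalar weight $w:=1+\omega^{2\kappa}$, rather than on the full slab $\Mcal_{[s_0,s_1]}$. The Stokes-formula derivation leading to~\eqref{equa-10mai2025} carries over verbatim on this subdomain, whose boundary now consists of $\Mcal^{\ME}_{s_0}$, $\Mcal^{\ME}_{s_1}$, and the conical hypersurface $\Ccal_{[s_0,s_1]}$. Since $\Ccal_{[s_0,s_1]}$ is assumed spacelike, its future-oriented unit normal $\vec{n}_{\Ccal}$ is timelike and the resulting boundary integral $\int_{\Ccal} w\,\la\Psi,\vec{n}_{\Ccal}\cdot\Psi\ra_{\ourD}\mathrm{Vol}_{\Ccal}$ is non-negative by the timelike positivity of the Dirac form (cf.~\eqref{eq2-02-mai-2025}); moreover $w\sim 1$ on $\Ccal$ since $r-t=-1$ there, so this boundary term is precisely $\Ebf^{\Ccal}(s_0,s_1;\Psi)$ up to universal constants.

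The essential non-negative bulk contribution comes from $\mathrm{grad}(w) = 2\kappa\,\omega^{2\kappa-1}\aleph'(r-t)\,\vec{\gamma}$. Under the bending hypothesis $H^{\Ncal00}<0$ in $\{r\geq 3t/4\}$, Lemma~\ref{lem1-18-aout-2025} shows $\vec{\gamma}=\mathrm{grad}(r-t)$ is future-oriented timelike, whence
\[
\la\Psi,\mathrm{grad}(w)\cdot\Psi\ra_{\ourD} = 2\kappa\,\omega^{2\kappa-1}\aleph'(r-t)\,|\Psi|_{\vec{\gamma}}^{2}\geq 0,
\]
and this term can be moved to the left-hand side. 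Using $\omega^{2\kappa-1}\aleph'(r-t)\gtrsim \omega^{2\kappa}\la r-t\ra^{-1}$ together with the trivial bound $\zeta^{2}\leq 1$, and the measure identities $\sqrt{\det\sigmab}\sim \zetab$ (Claim~\ref{lem1-02-march-2025}) and $\lapsb\sim J/\zetab$ (Claim~\ref{cor1-16-june-2025}) that give $\lapsb\,\mathrm{Vol}_{\sigmab}\sim J\,\diff x\sim s\,\diff x$ on $\Mcal^{\ME}$, this yields the claimed bulk term. The same volume estimates identify $\Ebf_{g,w}^{\ME}$ with $\Ebf_{\kappa}^{\ME}$ up to universal constants. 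For the source term, I would use the Dirac Cauchy--Schwarz bound $|\la\Phi,\Psi\ra_{\ourD}|\lesssim |\Phi|_{\vec{n}}|\Psi|_{\vec{n}}$ of Corollary~\ref{cor1-18-aout-2025}, split $w\lapsb\mathrm{Vol}_{\sigmab}\sim s\,(w\zetab)^{1/2}\cdot(w\zetab^{-1})^{1/2}\,\diff x\,\diff s$, and apply Cauchy--Schwarz in $x$ to obtain $\Ebf_{\kappa}^{\ME}(s,\Psi)^{1/2}$ times $\|\omega^{\kappa}\zeta^{2}\zetab^{-1/2}|\Phi|_{\vec{n}}\|_{L^{2}(\Mcal^{\ME}_{s})}$, the extra $\zeta^{2}$ being absorbed harmlessly since $\zeta\leq 1$.

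For~\eqref{eq2-10-oct-2025} the identical argument is applied to $t^{-1}\Psi$, starting from~\eqref{eq1-10-oct-2025} and the modified equation~\eqref{eq5-10-oct-2025}. The extra spacetime term $t^{-1}g^{0\mu}\la t^{-1}\Psi,\del_{\mu}\cdot t^{-1}\Psi\ra_{\ourD}$ is handled by writing $g^{0\mu}=\eta^{0\mu}+H^{0\mu}$: the flat part contributes $-t^{-1}|t^{-1}\Psi|_{\del_{t}}^{2}$, which is sign-definite and recombines on the left with the $\vec{\gamma}$-bulk to produce the two positive spacetime terms in the claim, while the $H^{0\mu}$ piece is dominated, via Proposition~\ref{prop1-09-aout-2025} and Lemma~\ref{lem2-13-july-2025} (which control $|\del_{\mu}|_{\vec{n}}\lesssim \zetab^{-1}$ and $|\Psi|_{\del_{t}}\lesssim \zetab^{-1}|\Psi|_{\vec{n}}$), by $t^{-1}|H|\,|t^{-1}\Psi|_{\vec{n}}^{2}$, producing exactly the $\|t^{-1}|H|\|_{L^{\infty}}\,\Ebf_{\kappa}^{\ME}(s,t^{-1}\Psi)$ error term. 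The main difficulty expected in carrying out the plan is the coherent bookkeeping among the three weights $\zeta$, $\zetab$, and $\lapsb$ in the transition region $\{r^{\Hcal}\leq r\leq r^{\Ecal}\}$ where $\zeta$ degenerates but $\zetab$ remains bounded below by $|H^{\Ncal00}|^{1/2}$; a secondary subtlety, for the $t^{-1}$-rescaled identity, is confirming that the $\del_{t}$-contribution has the correct sign to combine with the $\mathrm{grad}(w)$-contribution, which ultimately rests on $\del_{t}$ being future-oriented timelike under the smallness assumption~\eqref{eq2-14-june-2025}.
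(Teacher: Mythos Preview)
Your approach is essentially identical to the paper's: Stokes' formula on $\Mcal^{\ME}_{[s_0,s_1]}$ produces the conical boundary term, Lemma~\ref{lem1-18-aout-2025} gives positivity of the $\mathrm{grad}(\omega^{2\kappa})$ bulk term, and the source is handled by the Dirac Cauchy--Schwarz inequality followed by ordinary Cauchy--Schwarz in $x$; the second estimate follows by applying the first to $t^{-1}\Psi$ via~\eqref{eq5-10-oct-2025} and splitting $g^{0\mu}=\eta^{0\mu}+H^{0\mu}$.

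There is, however, a slip in your measure bookkeeping that propagates into a backwards inequality. You correctly deduce $\lapsb\,\mathrm{Vol}_{\sigmab}\sim J\,\diff x$ from Claims~\ref{lem1-02-march-2025} and~\ref{cor1-16-june-2025}, but then assert $J\sim s$ on $\Mcal^{\ME}$. This holds only on the Euclidean part $\Mcal^{\Ecal}$; on the merging region the correct relation is $J\sim s\zeta^2$, by Lemma~\ref{lem1-21-march-2025}. With the correct measure $\lapsb\,\mathrm{Vol}_{\sigmab}\sim s\zeta^2\,\diff x$, the $\zeta^2$ in both the bulk term and the source bound appear \emph{automatically}, and your remark that ``the extra $\zeta^2$ is absorbed harmlessly since $\zeta\leq 1$'' is both unnecessary and, as stated, points the wrong way: your version of the source bound without $\zeta^2$ would be \emph{larger} than the claimed $\|\omega^{\kappa}\zeta^2\zetab^{-1/2}|\Phi|_{\vec{n}}\|_{L^2}$, not smaller. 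Similarly, a sign: since $\eta^{00}=-1$, the flat piece of the extra term on the left-hand side is $+t^{-1}|t^{-1}\Psi|_{\del_t}^2$ (up to constants), not $-t^{-1}|t^{-1}\Psi|_{\del_t}^2$; this is what makes it combine with the correct sign.
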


\begin{proof} 
\bse
We proceed as in the proof of Proposition~\ref{prop1-04-april-2025}. We apply Stokes' formula in the domain $\Mcal^{\ME}_{[s_0,s_1]}$. When $\Ccal_{[s_0,s_1]}$ is spacelike, we observe that
$
\la \Psi,\vec{n}_{\Ccal}\cdot\Psi\ra_{\ourD}>0, 
$
which leads us to the positivity of $\Ebf^{\Ccal}(s,\Psi)$. Integrate \eqref{eq1-13-feb-2025} and \eqref{eq9-05-april-2025}, we obtain, similar to \eqref{equa-10mai2025},
\begin{equation}\label{eq2-08-oct-2025}
\aligned
& \int_{\Mcal^{\ME}_s}\hspace{-0.5cm}
\omega^{2\kappa}\la\Psi,\vec{n}\cdot\Psi \ra_{\ourD}\mathrm{Vol}_{\sigmab}
 + \int_{\Ccal_{[s_0,s_1]}}\hspace{-0.8cm}
\omega^{2\kappa}\la \Psi,\vec{n}_{\Ccal}\cdot\Psi\ra_{\ourD}\mathrm{Vol}_{\Ccal}
+ \int_{s_0}^{s_1}\int_{\Mcal^{\ME}_s}\hspace{-0.4cm}
\la\Psi,\mathrm{grad}(\omega^{2\kappa})\cdot\Psi \ra_{\ourD}\,\lapsb\,\mathrm{Vol}_{\sigmab}\,\diff s 
\\
& = \int_{\Mcal^{\ME}_s}\hspace{-0.5cm}
\omega^{2\kappa}\la\Psi,\vec{n}\cdot\Psi \ra_{\ourD}\mathrm{Vol}_{\sigmab} -2\int_{s_0}^{s_1}\int_{\Mcal^{\ME}_s}\hspace{-0.5cm}\omega^{2\kappa}\Re(\la\Phi,\Psi\ra_{\ourD})\,\lapsb\,\mathrm{Vol}_{\sigmab}\,\diff s. 
\endaligned
\end{equation}
Thanks to Claim~\ref{lem1-02-march-2025}, the first term is equivalent to $\Ebf_{\kappa}^{\ME}(s,u)$. For the second term, we observe that $\omega$ is constant along $\Ccal_{[s_0,s_1]}$. For the third term, we observe that 
\begin{equation}\label{eq5-08-oct-2025}
\aligned
& \int_{s_0}^{s_1}\int_{\Mcal^{\ME}_s}
\la\Psi,\mathrm{grad}(\omega^{2\kappa})\cdot\Psi\ra_{\ourD}
\,\lapsb\,\mathrm{Vol}_{\sigmab}\,\diff s 
\\
& = 
\int_{s_0}^{s_1}\int_{\Mcal^{\ME}_s}
\frac{2\kappa}{\omega}\aleph'(r-t)\omega^{2\kappa}\la\Psi,\mathrm{grad}(r-t)\cdot\Psi\ra_{\ourD}
\,\lapsb\,\mathrm{Vol}_{\sigmab}\,\diff s
\\
& = \int_{s_0}^{s_1}\int_{\Mcal^{\ME}_s}
\frac{2\kappa\omega^{2\kappa}}{2+r-t}\la\Psi,\mathrm{grad}(r-t)\cdot\Psi\ra_{\ourD}
\,\lapsb\,\mathrm{Vol}_{\sigmab}\,\diff s
\endaligned
\end{equation}
Recall Claims~\ref{lem1-02-march-2025}, \ref{cor1-16-june-2025} and Lemma~\ref{lem1-21-march-2025}, in $\Mcal^{\ME}_{[s_0,s_1]}$ these are positive universal constants $K_0,K_0',K_1,K_1'$
\begin{equation}\label{eq3-08-oct-2025}
s\zeta^2\diff x \leq K_0'J \diff x\leq\lapsb\,\mathrm{Vol}_{\sigmab}\leq K_1' J \diff x\leq s\zeta^2 \diff x
\end{equation}
in the sens of positive measure.
Thus the third term in the left-hand side of \eqref{eq2-08-oct-2025} is equivalent to
\begin{equation}
\int_{s_0}^{s_1}s\int_{\Mcal^{\ME}_s}\zeta^2\la r-t\ra^{-1}\omega^{2\kappa}\la \Psi,\mathrm{grad}(r-t)\cdot\Psi\ra_{\ourD}\diff x \diff s.
\end{equation}
Also thanks to \eqref{eq3-08-oct-2025}, the last term in the right-hand side of \eqref{eq2-10-oct-2025} can be controlled as
\begin{equation}\label{eq6-11-oct-2025}
\aligned
\Big|\int_{s_0}^{s_1}\int_{\Mcal^{\ME}_s}\hspace{-0.5cm}\omega^{2\kappa}\Re(\la\Phi,\Psi\ra_{\ourD})\,\lapsb\,\mathrm{Vol}_{\sigmab}\,\diff s\Big|
& \lesssim 
\int_{s_0}^{s_1}s\int_{\Mcal^{\ME}_s}\omega^{2\kappa}|\Phi|_{\vec{n}}|\Psi|_{\vec{n}}\zeta^2\diff x\diff s
\\
& \lesssim \int_{s_0}^{s_1}s\|\omega^{\kappa}\zeta^2\zetab^{-1/2}|\Phi|_{\vec{n}}\|_{L^2(\Mcal^{\ME}_s)}
\|\omega^{\kappa}\zetab^{1/2}\Psi\|_{L^2(\Mcal^{\ME}_s)}\diff s
\\
& \lesssim \int_{s_0}^{s_1}\Ebf_{\kappa}^{\ME}(s,\Psi)^{1/2}\|s\omega^{\kappa}\zeta^2\zetab^{-1/2}|\Phi|_{\vec{n}}\|_{L^2(\Mcal^{\ME}_s)}.
\endaligned
\end{equation}

%----------------------------------------------------------

Then we turn our attention to \eqref{eq2-10-oct-2025}. We rely on \eqref{eq4-10-oct-2025(l)}, applying  \eqref{eq5-10-oct-2025}, and emphasize one important issue associated with the decomposition
\be
\aligned
& \int_{s_0}^{s_1}\int_{\Mcal^{\ME}_s}\omega^{2\kappa}t^{-1}
\la t^{-1}\Psi,g^{0\mu}\del_{\mu}\cdot t^{-1}\Psi \ra_{\ourD}
\,\lapsb \, \mathrm{Vol}_{\sigmab}\diff s
\\
& =\int_{s_0}^{s_1}\int_{\Mcal^{\ME}_s}t^{-1}\omega^{2\kappa}
\la t^{-1}\Psi,\del_t\cdot t^{-1}\Psi\ra_{\ourD}
\,\lapsb \, \mathrm{Vol}_{\sigmab}\diff s
\\
& \quad +\int_{s_0}^{s_1}\int_{\Mcal^{\ME}_s}t^{-1}\omega^{2\kappa}
\la t^{-1}\Psi,H^{0\mu}\del_{\mu}\cdot t^{-1}\Psi\ra_{\ourD}
\,\lapsb \, \mathrm{Vol}_{\sigmab}\diff s
=: L_{21} + L_{22}.
\endaligned
\ee
The term $L_{22}$ can be bounded, thanks to Lemma~\ref{lem2-13-july-2025} and \eqref{eq2-10-oct-2025},
as follows: 
\begin{equation}
\aligned
|L_{22}| & \lesssim \int_{s_0}^{s_1}\int_{\Mcal^{\ME}_s}t^{-1}
\omega^{2\kappa}\zeta^{-1}|H||t^{-1}\Psi|_{\vec{n}}^2 J\diff x\diff s
\\
& \lesssim \int_{s_0}^{s_1}s\|t^{-1}|H|\|_{L^{\infty}(\Mcal^{\ME}_s)}\Ebf_{\kappa}^{\ME}(s,t^{-1}\Psi) \,\diff s.
\qedhere 
\endaligned
\end{equation}
\ese
\end{proof}

}

\subsection{ Admissible vector fields, high-order norms and homogeneous coefficients}
\label{section===83}

{ 

\paragraph{Admissible vector fields.}

We now define the Lie algebra of admissible vector fields and introduce a multi-index notation for iterated applications. Namely, we denote by $\mathscr{Z} := \{\del_{\alpha},L_a,\Omega_{ab}|\alpha=0,1,2,3; a,b=1,2,3\}$ the Lie algebra composed of
\be
\del_{\alpha}, \qquad L_a = t\del_a + x^a\del_t, \qquad \Omega_c = x^a\del_b - x^b\del_a
\ee
with $c\neq a, c\neq b$. They are called in the present article the {\sl admissible vector fields} or {\sl admissible operators}. A vector field contained in $\mathscr{Z}$ is generally denoted by $Z$.

Let $I = (\imath_1,\imath_2,\cdots,\imath_p)$ be a multi-index with $\imath_j\in \{0,1,2,\cdots,10\}$. Then we recall the adapted action defined by \eqref{eq3-04-july-2025}, and define its high-order version:
\be
\mathscr{Z}^I \Psi := Z_{\imath_1}Z_{\imath_2}\cdots Z_{\imath_p}\Psi 
= \widehat{Z}_{\imath_1}\widehat{Z}_{\imath_2}\cdots \widehat{Z}_{\imath_p}\Psi.
\ee
Here, $\mathscr{Z}^I$ is a $p$-th-order admissible operator with $Z_\alpha = \del_{\alpha}$ for $\alpha=0,1,2,3$; $Z_{3+a} = L_a$ and $Z_{6+a} = \Omega_a$ for $a=1,2,3$. In the same manner, for vector fields, we also recall \eqref{eq2-04-july-2025} and introduce the high-order adapted action:
\be
\mathscr{Z}^I X := Z_{\imath_1}Z_{\imath_2}\cdots Z_{\imath_p}X = \widehat{Z}_{\imath_1}\widehat{Z}_{\imath_2}\cdots \widehat{Z}_{\imath_p} X.
\ee
We also write the high-order Lie derivative 
\be
\Lcal_{\mathscr{Z}}^I X = \Lcal_{Z_{\imath_1}}\circ\Lcal_{Z_{\imath_2}}\circ\cdots\circ\Lcal_{Z_{\imath_p}}X. 
\ee
Moreover, if a $p$-order operator $\mathscr{Z}^I$ is composed of $(p-k)$ partial derivatives and $k$ boosts/rota\-tions (i.e., $L_a,\Omega_{ab}$), we write
\be
\ord(I) = p,\quad \rank(I) = k
\ee
and $I$ (or $\mathscr{Z}^I$) is said to be of type $(p,k)$. 

%-------------------------------------------

\paragraph{High-order norms.}

We now consider high-order norms $|\cdot|_{p,k}$ that distinguish between derivatives and commutators. We have introduced the notation $|u|_{p,k}$ for scalar fields in our earlier work, that is, 
\be
|u|_{p,k} = \max_{\ord(I)\leq p\atop \rank(I)\leq k}|\mathscr{Z}^I u|,
\quad
|u|_p = |u|_{p,p}.
\ee
See Definition~5.1 in \cite{PLF-YM-PDE}. For spinor and vector fields, we also introduce
\begin{equation}
\aligned
&[\Phi]_{p,k} = \max_{\ord(I)\leq p\atop \rank(I)\leq k}|\mathscr{Z}^I\Phi|_{\vec{n}},\quad
&&[X]_{p,k} = \max_{\ord(I)\leq p\atop \rank(I)\leq k}|\mathscr{Z}^IX|_{\vec{n}}
\\
&[\Phi]_p = [\Phi]_{p,p},
\quad
&&[X]_p = [X]_{p,p}. 
\endaligned
\end{equation}
For convenience in the discussion, for vector and tensor fields\footnote{Earlier, we introduced the notation $|H|$ and $|LH|$, which can be understood as zero-order norms of $H_{\alpha\beta}$,$L_{a}H_{\alpha\beta}$.}, we also introduce 
\be
\aligned
&|X|_{p,k} := \max_{\ord(I)\leq p\atop \rank(I)\leq k}\{|\mathscr{Z}^IX^{\alpha}|\},
\quad 
&&|T|_{p,k} :=  \max_{\ord(I)\leq p\atop \rank(I)\leq k}\{|\mathscr{Z}^IT_{\alpha\beta}|\},
\\
&|X|_{p} := |X|_{p,p},
\quad 
&&|T|_{p} :=  |T|_{p,p}.
\endaligned
\ee
Here, $\mathscr{Z}^IX^{\alpha}$ denotes the derivative of $\mathscr{Z}^I$ acting on the component $X^{\alpha}$, which is regarded as a scalar field. We recall that the Lie derivative of the metric is referred to as the deformation tensor. 

%-------------------------------------------------------------------

In the following discussion, we will frequently write diverse objects as finite linear combinations with functional coefficients which enjoy certain homogeneity. We have introduced these types of coefficients in our previous work in the hyperboloidal region (\cite{PLF-YM-one}) and in the merging-Euclidean region (\cite{PLF-YM-PDE}). Here we recall the following terminology. 

\begin{definition}
A (real-valued) function $f$ defined in (any subset of) $\Mcal_{[s_0,s_1]}$ is said to be homogeneous of degree $k$, if it is smooth, and
\bei

\item[1.] $\quad f(\lambda t,\lambda x) = \lambda^k f(t,x)$,

\item[2.] $\quad |\mathscr{Z}^I f(2,x)|\leq C_I t^{k-l}$ for $|x|\leq 1$ with $l = \rank{\mathscr{Z}^I} - \ord{\mathscr{Z}^I}$, and 

\item[3.]  $\quad |\mathscr{Z}^I f(t,\omega)|\leq C_I r^{k-l}$ for $|\omega|=1,\; 0<t\leq 1$ with $l = \rank{\mathscr{Z}^I} - \ord{\mathscr{Z}^I}$.
\eei

\noindent A smooth function satisfying 1.\ and 2.\ is called \textbf{interior-homogeneous}, while a function satisfying 1.\ and 3.\ is called \textbf{exterior-homogeneous}.
\end{definition}

Typical homogeneous functions are $(x^a/r)$ in the merging-Euclidean region and $(x^a/t)$ in the hyperboloidal region. In general, one can verify directly the following properties.

\begin{lemma}
Let $f,g$ be (interior-/exterior-)homogeneous functions of degree $l,m$. Then
\begin{equation}
|f|_{p,k}\lesssim 
\begin{cases}
t^{l-(p-k)}, \qquad & \text{in } \Mcal^{\Hcal}_{[s_0,s_1]}, \qquad \text{provided that $f$ is interior-homogeneous}, 
\\
r^{l-(p-k)}, \qquad & \text{in } \Mcal^{\ME}_{[s_0,s_1]}, \qquad \text{provided that $f$ is exterior-homogeneous}, 
\\
(r+t)^{l-(p-k)}, \qquad & \text{in } \Mcal_{[s_0,s_1]}, \qquad \text{provided that $f$ is homogeneous}.
\end{cases}
\end{equation}
Furthermore, one has: 
\\
$\bullet\quad$ the function $fg$ is (interior-/exterior-)homogeneous of degree $l+m$, and 
\\
$\bullet\quad$ the function $\alpha f + \beta g$ is (interior-/exterior-)homogeneous of degree $l$, provided that $l=m$ and $\alpha,\beta\in\RR$. 
\end{lemma}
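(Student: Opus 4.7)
The plan is to reduce everything to two observations about the interaction of the admissible fields with the dilation $\phi_\lambda(t,x):=(\lambda t,\lambda x)$. First I would differentiate the scaling identity $f\circ\phi_\lambda=\lambda^l f$ in $y^\alpha$ to obtain $(\del_\alpha f)\circ\phi_\lambda=\lambda^{l-1}\del_\alpha f$, so each partial derivative lowers the degree of homogeneity by one. Second, since the scaling field $S=t\del_t+x^a\del_a$ commutes with both $L_a=t\del_a+x^a\del_t$ and $\Omega_{ab}=x^a\del_b-x^b\del_a$, a direct substitution yields $(L_af)\circ\phi_\lambda=\lambda^l L_a f$ and the analogous identity for $\Omega_{ab}$, so boosts and rotations preserve the degree. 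A straightforward induction on the order then shows that whenever $\mathscr{Z}^I$ is of type $(p,k)$, the function $\mathscr{Z}^I f$ is smooth and satisfies property~1 of the definition with degree $l-(p-k)$.

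Next I would convert this scaling behaviour into the desired pointwise control by rescaling. In the hyperboloidal region $\Mcal^\Hcal_{[s_0,s_1]}$ one has $t\geq 1$ and $r\lesssim t$, so I would set $\lambda=2/t$ and write $(t,x)=\phi_{t/2}(2,2x/t)$, with $(2,2x/t)$ in a fixed compact set of the reference slice $\{t=2\}$. Combining the homogeneity of $\mathscr{Z}^I f$ with condition~2 of the definition of interior-homogeneity---which furnishes a uniform bound for $\mathscr{Z}^I f$ on that compact set---gives $|\mathscr{Z}^I f(t,x)|\lesssim t^{l-(p-k)}$. The merging-Euclidean estimate is obtained by the same procedure with the radial rescaling $\lambda=1/r$ and the reference set $\{|\omega|=1,\ 0<t\leq 1\}$, invoking condition~3 and exterior-homogeneity. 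For the globally homogeneous case both bounds apply, and since $t+r\lesssim t$ in the interior and $t+r\lesssim r$ in the exterior, they combine into the uniform bound $(t+r)^{l-(p-k)}$.

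For the closure properties, the product rule reads $(fg)\circ\phi_\lambda=\lambda^{l+m}fg$, so property~1 holds with degree $l+m$; properties~2--3 for $fg$ then follow from the Leibniz decomposition of $\mathscr{Z}^I(fg)$ into finite sums of products $(\mathscr{Z}^Jf)(\mathscr{Z}^Kg)$, each of which is smooth and hence bounded on the compact reference slice by the corresponding bounds for $f$ and $g$. Linearity in $f,g$ at a fixed common degree $l=m$ is immediate from property~1 and the linearity of $\mathscr{Z}^I$, and is preserved by the bounds in properties~2--3.

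I do not anticipate a serious obstacle, but the one point requiring mild care is ensuring that the compact reference set chosen by rescaling (for instance $|2x/t|\leq 2$) is actually covered by the region on which conditions~2--3 postulate uniform bounds. Since the admissible fields $L_a$ and $\Omega_{ab}$ are tangent to the reference slices $\{t=2\}$ and $\{r=1\}$ used above, and since smoothness of $f$ propagates to a neighborhood, a harmless enlargement of the reference compact set suffices without altering the inductive structure of the argument.
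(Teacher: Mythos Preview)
Your approach is correct and is exactly the direct verification the paper alludes to (no proof is given beyond the remark ``one can verify directly the following properties''). The core scaling argument---partial derivatives lower the homogeneity degree by one while boosts and rotations preserve it, then rescale to the reference slice and invoke the postulated bounds there---is standard and sound, and the closure under products and sums is immediate from Leibniz and linearity as you say.

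One minor correction: your tangency claim is false, since $L_a = t\del_a + x^a\del_t$ has a nonzero $\del_t$ component on $\{t=2\}$. This does not matter for the argument. The point you correctly flag---that the rescaled abscissa $2r/t$ can approach $2$ near the light cone while condition~2 of the definition only postulates bounds for $|x|\leq 1$---is a genuine imprecision in the paper's definition rather than a flaw in your reasoning; smoothness alone on the open cone does not give a uniform bound up to the boundary (consider $t/(t-r)$). In practice the homogeneous functions actually used in the paper (e.g.\ $x^a/t$, $x^a/r$, transition-matrix coefficients) are smooth on the \emph{closed} cone, so boundedness on the enlarged reference set holds automatically; under that implicit reading your proof is complete.
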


Let $u,v$ be functions defined in $\Mcal_{[s_0,s_1]}$ (or $\Mcal^{\Hcal}_{[s_0,s_1]}, \Mcal^{\ME}_{[s_0,s_1]}$ respectively) and let $f,g$ be (interior-/exterior-)homogeneous functions of degree zero. With our notation we have immediately
\begin{equation}\label{eq4-11-july-2025}
|fu + gv|_{p,k}\lesssim_{p} |u|_{p,k} + |v|_{p,k}.
\end{equation}
This estimate also holds for finite linear combinations with homogeneous coefficients.

}

%--------------------------------------------------------------------------------------------------------------------------------

\subsection{ High-order operators acting on vector fields}
\label{section===84}

{ 

We now study the action of high-order admissible operators on vector fields, and introduce the notion of admissible partition of a multi-index. This allows us to control the application of these operators to products such as $X\cdot\Psi$. Namely, let $I = (\imath_1,\imath_2,\cdots,\imath_p)$ be a multi-index. Then for $I_1 = (\jmath_1,\jmath_2,\cdots,\jmath_{p_1})$ and $I_2 = (k_1,k_2,\cdots, k_{p_2})$ with $p_1+p_2=p$ and there exist two injections, namely 
\be
i_1: \{1,2,\cdots,p_1\}\mapsto \{1,2,\cdots p\}, \qquad i_2:\{1,2,\cdots,p_2\}\mapsto \{1,2,\cdots,p\}
\ee
such that
\bei 

\item  $\mathrm{Im}(i_1)\cap \mathrm{Im}(i_2) = \emptyset$ and $\mathrm{Im}(i_1)\cup \mathrm{Im}(i_2) = \{1,2,\cdots,p\}$,

\item $i_1$ and $i_2$ are order-preserved, i.e., $i_{1,2}(j)<i_{1,2}(k)$ provided that $j<k$,

\item $\jmath_j = \imath_{i_1(j)}$ and $k_l = \imath_{i_2(l)}$.
\eei
\noindent 
That is, we separate the indices $\{1,2,\cdots,p\}$ into two groups, and we arrange them in each group according to the given order. Such a partition leads us to a decomposition of the index $I$. This is called an {\bf admissible partition} of $I$ and we write
\begin{equation}\label{eq9-14-july-2025}
I = I_1\odot I_2.
\end{equation}
Similarly, we define inductively the notion of admissible partition into more than two sub-multi-indices. Then we have the following result, whose proof is immediate and is based on \eqref{eq4-04-july-2025} and an induction on $|I|$.

\begin{lemma}
Let $\mathscr{Z}^I$ be an admissible operator. Then for any vector field $X$ and spinor field $\Psi$ one has 
\begin{equation}
\mathscr{Z}^I(X\cdot\Psi) = \sum_{I_1\odot I_2 = I}\mathscr{Z}^{I_1}X\cdot \mathscr{Z}^{I_2}\Psi, 
\end{equation}
where the sum is taken over all admissible partitions of the multi-index  $I$.
\end{lemma}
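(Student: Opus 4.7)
The plan is to argue by induction on $p = \ord(I)$. The base case $p=0$ reduces to $X\cdot\Psi = X\cdot\Psi$, the unique partition being $I = \emptyset \odot \emptyset$. For the inductive step, I would write $I = (\imath_1,\imath_2,\ldots,\imath_p)$ and set $I' := (\imath_2,\ldots,\imath_p)$, so that $\mathscr{Z}^I = Z_{\imath_1} \, \mathscr{Z}^{I'}$ in the sense of the adapted action. The induction hypothesis applied to $I'$ gives
\[
\mathscr{Z}^{I'}(X\cdot\Psi) = \sum_{I'_1\odot I'_2 = I'} \mathscr{Z}^{I'_1} X \cdot \mathscr{Z}^{I'_2}\Psi.
\]

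Next, I would apply $Z_{\imath_1}$ on both sides and invoke the third Leibniz rule in \eqref{eq4-04-july-2025}, namely $Z(Y\cdot\Phi) = ZY\cdot\Phi + Y\cdot Z\Phi$, term by term. This yields
\[
\mathscr{Z}^I(X\cdot\Psi) = \sum_{I'_1\odot I'_2 = I'} \Big( Z_{\imath_1}\mathscr{Z}^{I'_1} X \cdot \mathscr{Z}^{I'_2}\Psi + \mathscr{Z}^{I'_1} X \cdot Z_{\imath_1}\mathscr{Z}^{I'_2}\Psi\Big).
\]
Defining $J_1 := (\imath_1)\odot I'_1$ and $J_2 := I'_2$ in the first family of terms, and $J_1 := I'_1$ and $J_2 := (\imath_1)\odot I'_2$ in the second family, each summand takes the form $\mathscr{Z}^{J_1}X\cdot \mathscr{Z}^{J_2}\Psi$ with $J_1\odot J_2 = I$.

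The main (combinatorial) step is to verify that this procedure produces each admissible partition of $I$ exactly once. Given an admissible partition $I = J_1\odot J_2$, the order-preserving property of the injections $i_1,i_2$ forces $\imath_1$ to occur as the first entry of whichever sub-multi-index it belongs to. Removing this initial occurrence then produces a unique admissible partition of $I'$, and conversely the two prescriptions above describe exactly the inverse operation depending on whether $\imath_1$ was assigned to the first or the second part. This bijection shows that the double sum coincides term-by-term with $\sum_{J_1\odot J_2=I}\mathscr{Z}^{J_1}X\cdot\mathscr{Z}^{J_2}\Psi$, completing the induction. I expect the bookkeeping of this combinatorial bijection (rather than any analytic subtlety) to be the main obstacle, since all the analytic content is already packaged in the Leibniz rule~\eqref{eq4-04-july-2025}.
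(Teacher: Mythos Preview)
Your proof is correct and follows exactly the approach the paper indicates: induction on $|I|$ combined with the Leibniz rule \eqref{eq4-04-july-2025}. The paper simply declares the proof ``immediate'' from these two ingredients, whereas you have spelled out the combinatorial bijection between admissible partitions of $I$ and those of $I'$; this bookkeeping is indeed the only content, and your account of it is accurate.
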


Moreover, this type of expression also applies to scalar products involving vectors and spinors. With this notation, we now estimate $\mathscr{Z}^IX$, as follows. 

\begin{lemma}
If $f$ is an interior/exterior-homogeneous function of degree $k$ in $\Mcal^{\Hcal}_{[s_0,s_1]}/ \Mcal^{\ME}_{[s_0,s_1]}$, then one has 
\begin{equation}
|\mathscr{Z}^I(fX)|_{\vec{n}}\lesssim 
\begin{cases}
t^{k}|X|_{p,k}\quad & \text{in }\Mcal^{\Hcal}_{[s_0,s_1]}\text{ and }f\text{ interior-homogeneous},
\\
r^{k}|X|_{p,k}\quad & \text{in }\Mcal^{\ME}_{[s_0,s_1]}\text{ and }f\text{ exterior-homogeneous}.
\end{cases}
\end{equation} 
\end{lemma}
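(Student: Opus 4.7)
The plan is to proceed by induction on $|I|$, combined with the scalar--vector Leibniz rule for the adapted action $\widehat{Z}$. The key observation is that when the left factor is a scalar $f$, the Clifford--algebra correction in~\eqref{eq2-04-july-2025} only couples to the vector factor, so
\[
\widehat{Z}(fX) = \nabla_Z(fX) + \tfrac14 g^{\mu\nu}\bigl(fX\cdot\del_\mu\cdot\nabla_\nu Z - \del_\mu\cdot\nabla_\nu Z\cdot fX\bigr) = Z(f)\,X + f\,\widehat{Z}X.
\]
Iterating this identity along the multi-index $I$ and collecting terms according to which factor absorbs each $Z_{\imath_j}$ gives the admissible-partition decomposition
\[
\mathscr{Z}^I(fX) \;=\; \sum_{I_1\odot I_2 = I} \mathscr{Z}^{I_1}(f)\cdot \mathscr{Z}^{I_2}(X),
\]
where $\mathscr{Z}^{I_1}(f)$ is the scalar iterated admissible derivative (which coincides with the ordinary composition $Z_{\imath_1}\cdots Z_{\imath_{p_1}} f$ on scalar fields).

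For each term in the sum, suppose $I_1$ has type $(p_1,k_1)$ and $I_2$ has type $(p_2,k_2)$, so that $p_1+p_2=p$ and $k_1+k_2=k$. The homogeneity hypothesis on $f$ directly yields
\[
|\mathscr{Z}^{I_1}(f)| \;\lesssim\;
\begin{cases}
t^{\,k-(p_1-k_1)} & \text{in }\Mcal^{\Hcal}_{[s_0,s_1]} \text{ (interior case)},\\[2pt]
r^{\,k-(p_1-k_1)} & \text{in }\Mcal^{\ME}_{[s_0,s_1]} \text{ (exterior case)}.
\end{cases}
\]

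The remaining task is to control $|\mathscr{Z}^{I_2}(X)|_{\vec n}$ in terms of the component norm $|X|_{p_2,k_2}$. To this end I would run an auxiliary induction expanding $X = X^\alpha \del_\alpha$ and using the Leibniz rule for $\widehat{Z}$ again, now with the scalar factor $X^\alpha$, to write
\[
\mathscr{Z}^{I_2}(X^\alpha\del_\alpha) \;=\; \sum_{J_1\odot J_2 = I_2}\sum_{\alpha}\bigl(\mathscr{Z}^{J_1}X^\alpha\bigr)\,\mathscr{Z}^{J_2}(\del_\alpha).
\]
Each $\mathscr{Z}^{J_2}(\del_\alpha)$ is a vector field built from the coordinate basis through the adapted action; its components are homogeneous functions of degree $0$ plus error terms controlled by iterated derivatives of the metric perturbation, and its $\vec n$-norm is therefore uniformly bounded (pulling through Lemma~\ref{lem2-13-july-2025} and the homogeneity of $\Phiu, \Psiu$). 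This reduces $|\mathscr{Z}^{I_2}(X)|_{\vec n}$ to a linear combination of $|\mathscr{Z}^{J_1}X^\alpha|$, hence to $|X|_{p_2,k_2}$. Combining with the homogeneity bound above and summing over admissible partitions yields the claim, since $(p_1-k_1)\geq 0$ only improves the estimate.

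The main obstacle is the auxiliary step, i.e.\ tracking how the Christoffel and deformation-tensor corrections built into $\widehat{Z}$ propagate through iterated application to $\del_\alpha$. One must verify that the resulting geometric coefficients remain uniformly bounded in the $\vec n$-norm and do not degrade the $|X|_{p_2,k_2}$ count in rank or order; this relies on the structural estimates of Section~\ref{section===73} and the fact that the transition matrices $\Phiu,\Psiu,\Phib,\Psib$ and their admissible derivatives are homogeneous of degree zero in the relevant regions.
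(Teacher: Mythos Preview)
Your Leibniz decomposition and the homogeneity bound on $\mathscr{Z}^{I_1}f$ are exactly what the paper does. The paper's proof is in fact much shorter than yours: after writing
\[
\mathscr{Z}^I(fX)=\sum_{I_1\odot I_2=I}\mathscr{Z}^{I_1}f\,\cdot\,\mathscr{Z}^{I_2}X
\]
and noting that $\mathscr{Z}^{I_1}f$ remains homogeneous of degree at most that of $f$ (hence $\lesssim t^{k}$), the paper simply concludes. The remaining factor $|\mathscr{Z}^{I_2}X|_{\vec n}$ is, by definition, bounded by $[X]_{p,k}$, and this is all that is used; the right-hand side in the statement should be read as the adapted norm $[X]_{p,k}$.

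Your auxiliary step --- expanding $X=X^\alpha\del_\alpha$ and asserting that $|\mathscr{Z}^{J_2}(\del_\alpha)|_{\vec n}$ is ``uniformly bounded'' --- has a genuine gap. The claim fails already at zeroth order: by Proposition~\ref{prop1-09-aout-2025} and Lemma~\ref{lem2-13-july-2025} one has $|\del_\alpha|_{\vec n}\lesssim\zetab^{-1}$, not a constant. Invoking Lemma~\ref{lem2-13-july-2025} as you propose therefore \emph{introduces} a $\zetab^{-1}$ weight; this is precisely why the later estimate \eqref{eq5-16-july-2025} in Proposition~\ref{prop1-15-july-2025} carries that factor. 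Carried through correctly, your auxiliary route would yield $t^{k}\zetab^{-1}|X|_{p,k}$ (plus $H$-corrections), not $t^{k}|X|_{p,k}$. Drop the expansion: your first two steps together with the definition of $[X]_{p,k}$ already finish the proof, matching the paper's argument.
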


\begin{proof} We treat the domain $\Mcal^{\Hcal}_{[s_0,s_1]}$ and assume that $f$ is interior-homogeneous of degree $k$. We observe that 
$
L_af
$
is also homogeneous of degree $k$, while $\del_{\alpha}f$ is homogeneous of degree $(k-1)$. Then for any admissible operator $\mathscr{Z}^I$, $\mathscr{Z}^I f$ is also homogeneous of degree $\leq k$, thus bounded by $C_It^{-k}$. Thus we find 
\begin{equation}\label{eq1-15-july-2025}
\mathscr{Z}^I(fX) = \sum_{I_1\odot I_2 =I}\mathscr{Z}^{I_1}f\,\mathscr{Z}^{I_2}X, 
\end{equation}
which provides us with the desired result in the interior region. For the merging-Euclidean region, the argument is similar and we omit the details.
\end{proof}

}

%---------------------------------------------------------------------------------------------------------------------------------------

\subsection{ Zero-order estimate}
\label{section===85}

{

%-------------------------------------------------------- 

\begin{lemma}\label{lem1'-18-july-2025}
Assume \eqref{eq-USA-condition} holds for a sufficiently small $\eps_s$. Consider any multi-index $I$ with elements in $\{0,1,2,3\}$, and set 
\be
\aligned
& \Big(\prod_I\del\Big)\cdot\Psi := \del_{\jmath_1}\cdot\del_{\jmath_2}\cdots\cdot\del_{\jmath_p}\cdot\Psi,\quad \text{with}\quad \jmath=0,1,2,3;
\\
& \Big(\prod_I\delb\Big)\cdot\Psi := \delb_{\imath_1}\cdot\delb_{\imath_2}\cdots\cdot\delb_{\imath_p}\cdot\Psi,\quad \text{with}\quad \imath=1,2,3.
\endaligned
\ee
Then one has 
\begin{equation}\label{eq17'-18-july-2025}
\Big|\Big(\prod_I\del\Big)\cdot\Psi\Big|_{\vec{n}}\lesssim_p \zetab^{-1}|\Psi|_{\vec{n}}, 
\end{equation}
\begin{equation}
\Big|\Big(\prod_J\delb\Big)\cdot\Psi\Big|_{\vec{n}}\lesssim_p |\Psi|_{\vec{n}}.
\end{equation}
\end{lemma}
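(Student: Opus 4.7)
The strategy is to reduce the Clifford product $\prod_I \del$ to a linear combination (with coefficients controlled by finite polynomials in the metric components, which are bounded thanks to \eqref{eq2-14-june-2025}) of antisymmetrised wedge-products of the $\del_\jmath$'s, and then exploit the fact that in $4$ dimensions any such wedge-product involves \emph{at most one} factor proportional to the unit normal $\vec{n}$. This is what will produce a single $\zetab^{-1}$, rather than the naive $\zetab^{-|I|}$ obtained by applying Corollary~\ref{cor1-18-july-2025} to each $\del_\jmath$ independently.

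For the first estimate, I would fix a local orthonormal frame $\{e_0,e_1,e_2,e_3\}$ with $e_0=\vec{n}$ and the $e_i$ tangent to $\mathcal{M}_s$, and decompose each Cartesian partial as $\del_\jmath=a_\jmath^{\,0}\vec{n}+a_\jmath^{\,i}e_i$. Lemma~\ref{lem2-13-july-2025} (together with Proposition~\ref{prop1-09-aout-2025} for the normal component and Lemma~\ref{lem1-23-july-2025} for the tangential ones) yields $|a_\jmath^{\,0}|\lesssim\zetab^{-1}$ and $|a_\jmath^{\,i}|\lesssim 1$. Using the anti-commutation relation \eqref{clifford-anticomm} repeatedly (in the Wick/contraction form $X\cdot Y=X\wedge Y-g(X,Y)$), we expand $\prod_I\del$ as a finite sum, with coefficients that are polynomials in the bounded quantities $g_{\alpha\beta}$, of antisymmetric products $\del_{\jmath_{k_1}}\wedge\cdots\wedge\del_{\jmath_{k_m}}$ with $m\le 4$. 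Expanding each such wedge in the frame basis $\{e_{i_1}\cdots e_{i_m}\}_{i_1<\cdots<i_m}$, the coefficient of a given basis element is an $m\times m$ minor of the matrix $(a^i_{\jmath_k})$. Since $e_0\wedge e_0=0$, each basis element contains \emph{at most one} $e_0$ factor, so each such minor contains at most one row of ``normal'' entries of size $\zetab^{-1}$, the remaining entries being bounded; this forces every coefficient to satisfy $\lesssim_p\zetab^{-1}$. Finally, each basis element $e_{i_1}\cdots e_{i_m}$ acts as a fixed, bounded operator in the $|\cdot|_{\vec{n}}$ norm (by Corollary~\ref{cor1-18-july-2025} applied to the orthonormal frame), and summing yields \eqref{eq17'-18-july-2025}.

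The second estimate is immediate: the $\delb_\imath$ with $\imath=1,2,3$ are spatial tangential vectors on $\mathcal{M}_s$, and by Lemma~\ref{lem1-23-july-2025} they satisfy $|\delb_\imath|_{\vec{n}}=|\delb_\imath|_{\sigmab}\le K_1$. A direct application of Corollary~\ref{cor1-18-july-2025} then gives $|\prod_J\delb\cdot\Psi|_{\vec{n}}\lesssim_p |\Psi|_{\vec{n}}$.

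The main obstacle is the bookkeeping in the second step of the first estimate: one must verify that the reduction of the Clifford product via anti-commutation really produces only antisymmetrised pieces of total wedge-degree $\le 4$, and that the $\mathrm{SO}_{3,1}$-structure of the frame indeed prevents more than one normal entry from ever appearing in a given minor. The naive product estimate from Corollary~\ref{cor1-18-july-2025} loses a factor $\zetab^{-(|I|-1)}$, and all of this loss is recovered precisely through this antisymmetry argument, which is why the statement is independent of $|I|$ up to $p$-dependent constants.
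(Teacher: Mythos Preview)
Your proof of the second inequality is correct and matches the paper: $|\delb_\imath|_{\vec n}\le K_1$ for $\imath=1,2,3$, and Corollary~\ref{cor1-18-july-2025} gives the bound immediately.

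For the first inequality, however, there is a genuine gap. The claim $|a_\jmath^{\,i}|\lesssim 1$ for the \emph{tangential} orthonormal coefficients is false. Compute $|(\del_t)^\top|_{\sigmab}$: since $g(\del_t,\del_t)=-1$ and $g(\del_t,\vec n)=-J^{-1}\lapsb$ (cf.\ the proof of Proposition~\ref{lem1-11-aout-2025}), one finds $|(\del_t)^\top|_{\sigmab}^2=g(\del_t,\del_t)+g(\del_t,\vec n)^2\sim \zetab^{-2}$ near the light cone. Hence at least one tangential coefficient $a_t^{\,i}$ is of size $\zetab^{-1}$, and your minor argument collapses: with two rows of size $\zetab^{-1}$ (the $\vec n$-row \emph{and} the radial tangential row), a generic $m\times m$ minor is only $\lesssim\zetab^{-m}$. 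The additional cancellations that would save this (essentially that these two large rows are related by a Lorentz boost, so $2\times 2$ minors built from them are $O(1)$) are real but are not supplied by the ``at most one $e_0$'' antisymmetry observation alone.

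The paper instead works in the \emph{non-orthonormal} adapted frame $\{\delb_0,\delb_a\}$ and proves by induction the decomposition
\[
\prod_I\del \;=\; J^{-1}\sum_{|J|\le p}\Theta^J_I\,\delb_0\cdot\prod_J\delb \;+\; \sum_{|J|\le p}\Gamma^J_I\,\prod_J\delb,
\]
with bounded $\Theta,\Gamma$ and $\prod_J\delb$ involving only spatial $\delb_a$. The inductive step uses $\del_t=J^{-1}\delb_0$, $\del_a=-(x^a/r)\delb_rT\,J^{-1}\delb_0+\delb_a$, and crucially the Clifford identity $\delb_0\cdot\delb_0=-\gb_{00}=J^2(1-H_{00})$ (see \eqref{eq10-03-aout-2025} and \eqref{eq4-14-july-2025}): two factors of $\delb_0$ contract to a \emph{bounded} scalar times $J^2$, exactly cancelling the $J^{-2}$ from the coefficients. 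This is the mechanism your antisymmetry argument was trying to capture, but it lives naturally in the adapted frame where only \emph{one} direction ($\delb_0$) carries the large coefficient, not in the orthonormal frame where two directions do. Once the decomposition is in hand, Lemma~\ref{lem1-23-july-2025} and Corollary~\ref{cor1-18-july-2025} finish the job exactly as you do for the second inequality.
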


\begin{proof}
We will establish the decomposition
\begin{equation}\label{eq1-23-july-2025}
\prod_I\del = J^{-1}\sum_{|J|\leq p}\Theta^J_I\delb_0\cdot\prod_J\delb 
+ \sum_{|J|\leq p}\Gamma^J_I\prod_J\delb
\end{equation}
where $\Theta^J_I$ and $\Gamma^J_I$ are uniformly bounded functions defined in $\Mcal_{[s_0,s_1]}$. This is by induction on $p=|I|$. When $|I|=1$, we remark that
\be
\del_t = J^{-1}\delb_0,\quad \del_a = - J^{-1}\frac{x^a}{r}\delb_rT \delb_0 + \delb_a.
\ee
Then let us consider the product
$
\del_{\alpha}\cdot\prod_{I}\del.
$
When $\del_{\alpha} = \del_t = J^{-1}\delb_0$, we find 
\be
\aligned
\del_{\alpha}\cdot\prod_{I}\del & =  J^{-1}\delb_0\cdot\bigg(J^{-1}\sum_{|J|\leq p}\Theta^J_I\delb_0\cdot\prod_J\delb + \sum_{|J|\leq p}\Gamma^J_I\prod_J\delb\bigg)
\\
& = J^{-2}\gb_{00}\sum_{|J|\leq p}\Theta_I^J\prod_J\delb
+ J^{-1}\sum_{|J|\leq p}\Gamma^J_I\delb_0\cdot\prod_J\delb.
\endaligned
\ee
We observe that
\be
\gb_{00} = \bar{\eta}_{00} + \Hb_{00} = -J^2 + \Hb_{00},
\ee
and
\be
\Hb_{00} = H(\delb_0,\delb_0) = J^2H_{00}.
\ee
By recalling \eqref{eq2-14-june-2025}, the induction is completed.

Next, we turn our attention to the proof of \eqref{eq17'-18-july-2025}. Based on \eqref{eq1-23-july-2025},  Lemma~\ref{lem1-23-july-2025} and Corollary~\ref{cor1-18-july-2025}, we have 
\be
\Big|\Big(\prod_J\delb\Big)\cdot\Psi\Big|_{\vec{n}}\lesssim_p |\Psi|_{\vec{n}}.
\ee
On the other hand, recalling \eqref{eq2-17-july-2025} we have 
\be
\Big|\delb_0\cdot\Big(\prod_J\delb\Big)\cdot\Psi\Big|_{\vec{n}}\lesssim_p J\zetab^{-1}|\Psi|_{\vec{n}}.
\ee
Since $0<\zeta\leq 1$, we obtain the desired estimate.
\end{proof}

%------------------------------------------------------------------------------------------

We then list several estimates (which will be frequently used), that are direct consequences of the above lemmas.

\begin{lemma}
\label{lem1-01-aout-2025}
Assume that \eqref{eq-USA-condition} holds with a sufficiently small $\eps_s$. Then one has 
\begin{equation}\label{eq1-31-july-2025}
|\del_{\mu}\cdot\nabla_{\nu}\del_{\alpha}\cdot\Psi|_{\vec{n}}\lesssim \zetab^{-1}|\del H||\Psi|_{\vec{n}}.
\end{equation}
\begin{equation}\label{eq2-31-july-2025}
|\del_{\mu}\cdot\nabla_{\nu}L_a\cdot\Psi|_{\vec{n}}
\lesssim \zetab^{-1}\big(1+(t+r)|\del H|\big)|\Psi|_{\vec{n}}.
\end{equation}
\begin{equation}\label{eq3-31-july-2025}
|\del_{\mu}\cdot\nabla_{\nu}L_a\cdot\del_{\mu'}\cdot \nabla_{\nu'} \del_{\alpha}\cdot\Psi|_{\vec{n}}\lesssim \zetab^{-1}\big(|\del H| + (t+r)|\del H|^2\big)|\Psi|_{\vec{n}},
\end{equation}
\begin{equation}\label{eq4-31-july-2025}
|\del_{\mu}\cdot\nabla_{\nu}L_b\cdot \del_{\mu'}\cdot\nabla_{\nu'}L_a\cdot\Psi|_{\vec{n}}
\lesssim \zetab^{-1}(1+(t+r)^2|\del H|^2)|\Psi|_{\vec{n}}.
\end{equation}
\begin{equation}\label{eq6-31-july-2025}
\big|\nabla_{\alpha}\big(g^{\mu\nu}\del_{\mu}\cdot\nabla_{\nu}\del_{\beta}\big)\cdot\Psi\big|_{\vec{n}}
\lesssim \zetab^{-1}\big(|\del\del H| + |\del H| + |\del H|^2\big)|\Psi|_{\vec{n}}.
\end{equation}
\begin{equation}\label{eq7-31-july-2025}
\big|\nabla_{\alpha}\big(g^{\mu\nu}\del_{\mu}\cdot\nabla_{\nu}L_b\big)\cdot\Psi\big|_{\vec{n}}
\lesssim \zetab^{-1}|\del H||\Psi|_{\vec{n}} + \zetab^{-1}(t+r)\big( |\del\del H| + |\del H|^2\big)|\Psi|_{\vec{n}}.
\end{equation}
\end{lemma}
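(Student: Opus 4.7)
The plan is to reduce every expression in Lemma~\ref{lem1-01-aout-2025} to a linear combination of Clifford products of coordinate basis vectors $\del_\beta$ acting on $\Psi$, with scalar coefficients bounded by the quantities appearing on the right-hand side, and then apply Lemma~\ref{lem1'-18-july-2025} to absorb the full product into a single factor of $\zetab^{-1}|\Psi|_{\vec{n}}$. The crucial structural point, already exploited in the derivation of \eqref{eq1-23-july-2025}, is that \emph{any} finite Clifford product of coordinate partial derivatives acting on a spinor gains only \emph{one} factor of $\zetab^{-1}$, not one per derivative; this is what makes all five estimates look uniform in the power of $\zetab$.

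First I would expand each covariant derivative in coordinates, writing $\nabla_{\nu}\del_{\alpha}=\Gamma^{\beta}_{\nu\alpha}\del_{\beta}$ and $\nabla_{\nu}L_a=(\del_{\nu}L_a^{\beta})\del_{\beta}+L_a^{\gamma}\Gamma^{\beta}_{\nu\gamma}\del_{\beta}$, where the components $L_a^{\beta}$ are either $t$, $x^a$, or zero, so that $\del_{\nu}L_a^{\beta}$ is bounded by a universal constant. Since $g_{\alpha\beta}=\eta_{\alpha\beta}+H_{\alpha\beta}$ with $\eta$ constant, Lemma~\ref{lem1-16-june-2025} together with the smallness condition \eqref{eq-USA-condition} gives $|g^{\alpha\beta}|\lesssim 1$, $|\Gamma^{\gamma}_{\alpha\beta}|\lesssim|\del H|$, and $|\del \Gamma^{\gamma}_{\alpha\beta}|\lesssim|\del\del H|+|\del H|^2$. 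These bounds immediately produce coefficient estimates $|\nabla_{\nu}\del_{\alpha}|\lesssim|\del H|$ for \eqref{eq1-31-july-2025} and $|\nabla_{\nu}L_a|\lesssim 1+(t+r)|\del H|$ for \eqref{eq2-31-july-2025}. Writing $\del_{\mu}\cdot\nabla_{\nu}X\cdot\Psi$ as a finite sum of terms $(\text{scalar coeff})\cdot(\del_{\mu}\cdot\del_{\beta}\cdot\Psi)$ and applying \eqref{eq17'-18-july-2025} with $p=2$ yields the two first-order estimates.

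The composite estimates \eqref{eq3-31-july-2025} and \eqref{eq4-31-july-2025} proceed identically: each factor $\nabla_{\nu}(\cdot)$ is replaced by its coordinate expansion, producing a sum of terms of the form $(\text{scalar coeff})\cdot\del_{\alpha_1}\cdot\del_{\alpha_2}\cdot\del_{\alpha_3}\cdot\del_{\alpha_4}\cdot\Psi$; Lemma~\ref{lem1'-18-july-2025} with $p=4$ again absorbs the four-fold Clifford product into $\zetab^{-1}|\Psi|_{\vec{n}}$. The scalar coefficients simply multiply, giving $|\del H|\cdot(1+(t+r)|\del H|)=|\del H|+(t+r)|\del H|^2$ for \eqref{eq3-31-july-2025} and $(1+(t+r)|\del H|)^2\lesssim 1+(t+r)^2|\del H|^2$ for \eqref{eq4-31-july-2025}, after absorbing the cross term by Young's inequality.

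For \eqref{eq6-31-july-2025} and \eqref{eq7-31-july-2025}, I would apply the Leibniz rule for the Levi-Civita connection to the Clifford-valued tensor $g^{\mu\nu}\del_{\mu}\cdot\nabla_{\nu}X$, using $\nabla g=0$ and $\nabla_{\alpha}\del_{\mu}=\Gamma^{\rho}_{\alpha\mu}\del_{\rho}$. Every term produced is again a sum of products of two coordinate partial derivatives acting on $\Psi$, whose scalar coefficient is bounded either by $|\del\del H|$ (when $\nabla_{\alpha}$ hits a Christoffel symbol), by $|\del H|^2$ (when $\nabla_{\alpha}$ hits $g^{\mu\nu}$ or is applied via Christoffel to a basis vector), or by the coefficient bound of \eqref{eq1-31-july-2025}/\eqref{eq2-31-july-2025} multiplied by an extra $|\del H|$ from an additional Christoffel factor. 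Lemma~\ref{lem1'-18-july-2025} then closes the estimate. The only mildly delicate point is in \eqref{eq7-31-july-2025}, where one must track the ``bounded'' portion of $\nabla_{\nu}L_b$ coming from $\del_{\nu}L_b^{\beta}$ separately from the $(t+r)|\del H|$ portion: differentiating the bounded piece by $\nabla_{\alpha}$ produces a lone $|\del H|$ term unmultiplied by $(t+r)$, which accounts for the first summand in the stated bound, while the $(t+r)|\del H|$ piece yields $(t+r)(|\del\del H|+|\del H|^2)$ after differentiation. This book-keeping is the main source of notational complexity, but there is no analytical obstacle — every reduction is an application of the product rule combined with the sharp Clifford-product bound \eqref{eq17'-18-july-2025}.
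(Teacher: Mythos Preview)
Your proposal is correct and follows essentially the same route as the paper: expand each covariant derivative in coordinates via Christoffel symbols, bound the resulting scalar coefficients, and absorb the remaining Clifford product of coordinate vectors through Lemma~\ref{lem1'-18-july-2025}. The only cosmetic difference is that for \eqref{eq6-31-july-2025}--\eqref{eq7-31-july-2025} the paper first invokes Lemma~\ref{lem1-24-feb-2025} together with $\nabla g=0$ to write $\nabla_{\alpha}\bigl(g^{\mu\nu}\del_{\mu}\cdot\nabla_{\nu}X\bigr)=g^{\mu\nu}\del_{\mu}\cdot\nabla_{\alpha}\nabla_{\nu}X$ in one step (yielding the coordinate formula \eqref{eq16-16-aout-2025}), whereas you apply the Leibniz rule directly; both produce the same collection of terms and the same coefficient bounds.
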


\begin{proof} 
\bse
We deal first with  \eqref{eq1-31-july-2025} and define 
$
\Phi_{\nu\alpha}^{\gamma} := \Gamma_{\nu\alpha}^{\gamma}\Psi$, 
where $\Gamma_{\nu\alpha}^{\gamma}\del_{\gamma} = \nabla_{\nu}\del_{\alpha}$ are the Christoffel symbols. Then we have 
\be
\del_{\mu}\cdot\nabla_{\nu}\del_{\alpha}\cdot \Psi = \del_{\mu}\cdot\del_{\gamma}\cdot\Phi_{\nu\alpha}^{\gamma}.
\ee
By Lemma~\ref{lem1'-18-july-2025}, we deduce that 
\be
\big|\del_{\mu}\cdot\nabla_{\nu}\del_{\alpha}\cdot \Psi\big|_{\vec{n}}
\lesssim \zetab^{-1}\big|\Phi_{\nu\alpha}^{\gamma}\big|_{\vec{n}}
\lesssim \zetab^{-1}|\Gamma_{\nu\alpha}^{\gamma}||\Phi|_{\vec{n}}
\ee
and we observe that
\be
\Gamma_{\nu\alpha}^{\gamma} 
= \frac{1}{2}g^{\gamma\delta}\big(\del_{\nu}g_{\alpha\delta} + \del_{\alpha}g_{\nu\delta} - \del_{\delta}g_{\nu\alpha}\big)
\ee
and $|g^{\gamma\delta}|\lesssim 1$ (provided that \eqref{eq2-16-june-2025} holds with a sufficiently small $\eps_s$). Thus we find 
\begin{equation}\label{eq5-31-july-2025}
|\Gamma_{\alpha\beta}^{\gamma}|\lesssim |\del H|, 
\end{equation} 
which is the desired estimate.
\ese
\bse
The proof of \eqref{eq2-31-july-2025}, \eqref{eq3-31-july-2025} and \eqref{eq4-31-july-2025} are similar. We estimate the (products of) the Christoffel symbols and then apply Lemma~\ref{lem1'-18-july-2025}. For $\nabla_{\nu}L_a$, we note that
\begin{equation}\label{eq9-22-july-2025}
\nabla_{\nu}L_a = \nabla_{\nu}(t\del_a + x^a\del_t) 
= \delta_{\nu}^0\del_a + \delta_{\nu}^a\del_t + t\nabla_{\nu}\del_a + x^a\nabla_{\nu}\del_t.
\end{equation}
Thus we have 
\begin{equation}\label{eq3-30-july-2025}
|X^{\alpha}|\lesssim 1 + (t+r)|\del H|,\quad \text{with}\quad \nabla_{\nu}L_a = X^{\alpha}\del_{\alpha}, 
\end{equation}
which establishes \eqref{eq2-31-july-2025}.

The last two estimates are somewhat different. For any vector field $X = X^{\alpha}\del_{\alpha}$, thanks to Lemma~\ref{lem1-24-feb-2025} and the fact that $\nabla_{\alpha}g\equiv 0$, we have 
\begin{equation}
\nabla_{\alpha}\big(g^{\mu\nu}\del_{\mu}\cdot\nabla_{\nu}X\big) 
= g^{\mu\nu}\del_{\mu}\cdot\nabla_{\alpha}\nabla_{\nu}X
= g^{\mu\nu}\del_{\mu}\cdot\nabla_{\alpha}\big(\nabla_{\nu}X\big) - g^{\mu\nu}\del_{\mu}\cdot\nabla_{\nabla_{\alpha}\del_{\nu}}X.
\end{equation}
The above calculation leads us to
\begin{equation}\label{eq16-16-aout-2025}
\aligned
\nabla_{\alpha}\big(g^{\mu\nu}\del_{\mu}\cdot\nabla_{\nu}X\big) 
& =  g^{\mu\nu}\big(\del_{\alpha}\del_{\nu}X^{\beta} 
+ \del_{\nu}X^{\gamma}\Gamma_{\alpha\gamma}^{\beta} 
+ \del_{\alpha}(X^{\gamma}\Gamma_{\nu\gamma}^{\beta})
+ X^{\gamma}\Gamma_{\nu\beta}^{\delta}\Gamma_{\alpha\delta}^{\beta}\big)\del_{\mu}\cdot\del_{\beta}
\\
& \quad - g^{\mu\nu}\Gamma_{\alpha\nu}^{\gamma}\del_{\gamma}X^{\beta} \del_{\mu}\cdot\del_{\beta}.
\endaligned
\end{equation}
Now for $X = \del_{\beta}$ or $X = L_a$, we have $\del_{\alpha}\del_{\gamma}X^{\beta} = 0$. We thus arrive at \eqref{eq6-31-july-2025}, and \eqref{eq7-31-july-2025}
\ese
\end{proof}

}

%==================================================================================

\section{Klainerman-Sobolev inequalities for spinor fields on Eucli\-dean-hyper\-boloidal slices}
\label{section=N6}

\subsection{ Objective and main result}
\label{section===91}

{

\paragraph{Objective for this section.}

The goal of this section is to establish Sobolev-type inequalities for spinor fields on a curved spacetime foliated by Euclidean-hyperboloidal slices. These estimates are particularly tailored to the geometry of the foliation under consideration, and will be used for controlling the pointwise behavior of spinor fields in terms of weighted $L^2$ norms of suitable derivatives. The originality of our approach lies in adapting the classical vector field method to \emph{gauge-independent} spinorial quantities on a Euclidean-hyperboloidal background satisfying the \emph{uniform spacelike condition}. 

Indeed, our results and derivation concern the gauge-independent contraction $\la\Psi,\vec{n}\cdot\Psi\ra_{\ourD}$ and the analysis of its derivatives with respect to admissible vector fields. A crucial technical challenge for dealing with the curved setting is the derivation of estimates for the geometric error terms induced by commutators and deformation tensors, as defined in Proposition~\ref{prop1-08-june-2025}, below. We compute and rely on geometric identities involving spinor covariant derivatives together with curvature-related and foliation-related contributions, in order to isolate and control these error terms. The resulting inequalities (cf.~Theorem~\ref{thm2-05-april-2025} stated next) provide sharp pointwise bounds and will serve as a fundamental ingredient in our subsequent nonlinear analysis of the Einstein equations. In this section, we thus combine a series of techniques (geometric arguments, regularization, localization) and emphasize that our treatment does not rely on exact Killing symmetries of the underlying spacetime, which are absent in a general spacetime.

%---------------------------------------------------- 

\paragraph{Main result for this section.}

We use the notation in Section~\ref{section-1--4}, and work with the previously constructed parameterization of the Euclidean-hyperboloidal slices and we also consider the natural frame $\{\delb_0,\delb_a\}$. Recall that $\vec{n}$ denotes the future oriented unit normal vector along the slices $\mathcal{M}_s$. Recall also that $\widehat{X}$ denotes the Clifford-corrected derivative  \eqref{equa-511m} of any admissible vector field $X$. 

\begin{theorem}[Klainerman-Sobolev inequalities for spinor fields on a curved spacetime]
\label{thm2-05-april-2025}Suppose that uniform spacelike \eqref{eq-USA-condition} holds for some sufficiently small $\eps_s$. Consider any sufficiently regular spinor field $\Psi$ defined in the spacetime slab $\mathcal{M}_{[s_0,s_1]}$. 
\\
\noindent$\bullet$ \textbf{Hyperboloidal domain.} Provided the metric satisfies the decay condition
\begin{equation}\label{eq5a-05-april-2025}
(s/t)^{-1}|H|_1 + \frac{(s/t)^{-2}|H^{\Ncal00}|_1}{1+(s/t)^{-2}|H^{\Ncal00}|}\lesssim (s/t)^{- \delta}
\quad
\text{ in } \Mcal^{\Hcal}_{[s_0,s_1]} \quad \text{ for some } 0\leq \delta\leq 1, 
\end{equation}
one has  
\begin{equation}\label{eq9-16-june-2025}
(s/t)^{1/2+2\delta}t^{3/2}|\Psi|_{\vec{n}}\lesssim \sum_{|J|\leq 2} \bigl\|(s/t)^{1/2}|\mathscr{L}^J\Psi|_{\vec{n}} \bigr\|_{L^2(\mathcal{M}^{\Hcal}_s)}
\quad \text{ in } \Mcal^{\Hcal}_{[s_0,s_1]}.
\end{equation}

\noindent$\bullet$ \textbf{Merging-Euclidean domain.}
Provided  the metric satisfies the decay condition 
\begin{equation}\label{eq6-17-aout-2025}
\zeta^{-2}|\del H| + \zeta^{-1}|H|_1 + \frac{\zeta^{-2}|H^{\Ncal00}|_1}{1+\zeta^{-2}|H^{\Ncal00}|}\lesssim \zeta^{- \delta} \quad \text{ in } \Mcal^{\ME}_{[s_0,s_1]}
\quad \text{ for some } 0 < \delta\leq 1, 
\end{equation} 
one has, for $0\leq \kappa\leq 1$,
\begin{equation}\label{eq5-18-aout-2025}
\zeta^{1/2+2\delta}(2+r-t)^{\kappa}(1+r)|\Psi|_{\vec{n}}\lesssim_{\delta} \sum_{j+|K|\leq 2}\|(2+r-t)^{\kappa}\zeta^{1/2}|\widehat{\del}^j\widehat{\Omega}^K\Psi|_{\vec{n}}\|_{L^2(\mathcal{M}^{\ME}_s)}
\quad \text{ in } \Mcal^{\ME}_{[s_0,s_1]}.
\end{equation}
\noindent$\bullet$ \textbf{Euclidean domain.}
Provided \eqref{eq6-17-aout-2025} holds, one has 
\begin{equation}\label{eq11-19-aout-2025}
(2+r-t)^{\kappa}(1+r)|\Psi|_{\vec{n}}\lesssim \sum_{j+|K|\leq 2}\|(2+r-t)^{\kappa}|\widehat{\del}^j\widehat{\Omega}^K\Psi|_{\vec{n}}\|_{L^2(\mathcal{M}^{\Ecal}_s)} 
\quad \text{ in } \Mcal^{\Ecal}_{[s_0,s_1]}.
\end{equation}
\end{theorem}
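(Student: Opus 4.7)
The plan is to reduce the three spinorial inequalities to scalar Klainerman--Sobolev inequalities on the Euclidean--hyperboloidal foliation, applied to the real-valued function $f := |\Psi|_{\vec{n}}^{2} = \la\Psi,\vec{n}\cdot\Psi\ra_{\ourD}$, and then to control the geometric error terms produced when the admissible vector fields are applied to $f$ rather than to $\Psi$ itself. The scalar Klainerman--Sobolev inequalities on hyperboloidal and on Euclidean slices under the uniform spacelike condition are essentially established in \cite{PLF-YM-PDE}; what is genuinely new here is the spinor accounting. The basic identity I would start from is
\begin{equation*}
Z(f) = \la \widehat{Z}\Psi,\vec{n}\cdot\Psi\ra_{\ourD} + \la \Psi,\vec{n}\cdot\widehat{Z}\Psi\ra_{\ourD} + \la \Psi, (\widehat{Z}\vec{n})\cdot\Psi\ra_{\ourD},
\end{equation*}
which follows from the Leibniz rule in \eqref{eq4-04-july-2025} once $\vec{n}$ is viewed as a vector field and $\widehat{Z}\vec{n}$ is expanded via Lemma~\ref{lem4-06-oct-2025(l)} in terms of the deformation tensor $\pi[Z]$ and $\Lcal_Z\vec{n}$. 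Iterating this identity for $|J|\leq 2$ and applying the Cauchy--Schwarz-type estimate from Lemma~\ref{lem1-06-april-2025}, one obtains a pointwise bound
\begin{equation*}
|Z^J f| \lesssim \sum_{|J_1|+|J_2|\leq |J|} |\widehat{Z}^{J_1}\Psi|_{\vec{n}} \, |\widehat{Z}^{J_2}\Psi|_{\vec{n}}\,\big(1 + \mathcal{R}_{J_1,J_2}\big),
\end{equation*}
with remainder $\mathcal{R}_{J_1,J_2}$ built from $\widehat{Z}^{k}\vec{n}$ for $k\leq|J|$.

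For the hyperboloidal estimate \eqref{eq9-16-june-2025}, I would take $Z\in\{L_a\}$, which are tangent to $\Mcal^{\Hcal}_s$, apply the scalar hyperboloidal Klainerman--Sobolev inequality to $f$, and then invoke $|Z^J f|^{1/2}\lesssim\sum_{|J|\leq 2}|\widehat{Z}^J\Psi|_{\vec{n}}(1+\mathcal{R})^{1/2}$. The remainder is controlled by expressing $\vec{n}$ through $\lapsb,\betab^a,\delb_{\alpha}$ as in \eqref{eq5-12-june-2025} and bounding $L_a\lapsb$, $L_a\betab^a$, $L_a\sigmab^{ab}$ using Lemma~\ref{lem1-08-aout-2025} and Proposition~\ref{prop1-27-june-2025}, so that the loss is exactly the $(s/t)^{-\delta}$ from \eqref{eq5a-05-april-2025}; the factor $(s/t)^{1/2+2\delta}$ on the left then arises from taking the square root (each of the two spinor factors contributes one $(s/t)^\delta$). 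For \eqref{eq5-18-aout-2025} in the merging--Euclidean region, I would apply instead the weighted Euclidean scalar Klainerman--Sobolev inequality with rotations $\Omega_{ab}$ and translations $\del_a$; the weight $(2+r-t)^\kappa$ commutes harmlessly with rotations and enjoys the standard Leibniz rule with $\del_a$, which is why only $(\widehat{\del}^j\widehat{\Omega}^K\Psi)$ appear on the right. The Euclidean estimate \eqref{eq11-19-aout-2025} is then the special case $\zeta\equiv 1$.

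The principal obstacle is the control of the error term $\la\Psi,(\widehat{Z}\vec{n})\cdot\Psi\ra_{\ourD}$ near the light cone, where $\zeta$ degenerates and $\vec{n}$ becomes nearly null. Here the light-bending assumption $H^{\Ncal00}<0$ is essential: it is what makes $\zetab = \sqrt{\zeta^2+|H^{\Ncal00}|}$ the right weight, and it is what prevents $\lapsb$ and $\betab^a$ from blowing up faster than $\zetab^{-1}$ (Claim~\ref{cor1-16-june-2025} and Claim~\ref{clm-1-29-june-2025}). The hypothesis \eqref{eq6-17-aout-2025} was crafted precisely so that, after writing $\widehat{Z}\vec{n}$ as $\lapsb^{-1}\Lcal_Z(\lapsb\vec{n})$ plus corrections, every coefficient inherits at most a $\zeta^{-\delta}$ loss; this is then absorbed by the $\zeta^{1/2+2\delta}$ prefactor when one passes from $f$ back to $|\Psi|_{\vec{n}}$ by square root. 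A secondary technical step is to check that the commutators $[\widehat{Z}_1,\widehat{Z}_2]$ arising when one expands $\widehat{Z}^J\Psi$ out of $Z^J f$ contribute only lower order (in $|J|$) terms with acceptable weights; this is a direct consequence of Proposition~\ref{prop1-22-june-2025}, whose remainder is a quadratic form in $\pi[Z]$, controlled by the same metric hypotheses \eqref{eq5a-05-april-2025} and \eqref{eq6-17-aout-2025}. Once these error bounds are in place, the three estimates follow by combining the scalar Klainerman--Sobolev inequality with the pointwise error control and taking the square root.
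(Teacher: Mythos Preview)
Your overall strategy is correct and matches the paper's: reduce to scalar Klainerman--Sobolev via the function $A[\Psi] = |\Psi|_{\vec{n}}^2$, control its admissible derivatives through the identity of Proposition~\ref{prop1-08-june-2025}, and estimate the geometric error $\la\Psi,\vec{w}[Z]\cdot\Psi\ra_{\ourD}$ using the metric hypotheses. However, your ``taking the square root'' step hides a genuine gap. If you apply scalar Sobolev to $f=A[\Psi]$ directly, you get $t^{3/2}|\Psi|_{\vec{n}}^2 \lesssim \sum_{|J|\le2}\|L^Jf\|_{L^2}$, but $L^Jf$ is \emph{quadratic} in $\Psi$, so its $L^2$ norm is controlled by $L^4$ norms of $|\widehat{L}^{J'}\Psi|_{\vec{n}}$, not the $L^2$ norms the theorem claims. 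If instead you apply scalar Sobolev to $\sqrt{f}=|\Psi|_{\vec{n}}$, that function is only Lipschitz at zeros of $\Psi$ and the required derivatives may fail to exist.

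The paper resolves this by a regularization you do not mention: it sets $u_k:=\sqrt{(s/t)^{1+4\delta}A[\Psi]+k^{-2}}$ and $v_k:=\sqrt{(s/t)^{1+2\delta}A[\Psi]+k^{-2}}$, which are smooth, and uses $|L_au_k|=|L_a(u_k^2)|/(2u_k)$ together with the lower bound $u_k\gtrsim(s/t)^{1/2+2\delta}|\Psi|_{\vec{n}}$ to convert the quadratic bound on $L_a(u_k^2)$ into a \emph{linear} bound $|L_au_k|\lesssim(s/t)^{1/2+\delta}|\Psi|_{\vec{n}}+(s/t)^{1/2+2\delta}\sum_{|I|=1}|\widehat{L}^I\Psi|_{\vec{n}}$ (Lemma~\ref{lem1-10-mai-2025}). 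The two exponents $1+4\delta$ and $1+2\delta$ are not accidental: when one runs the two-step $L^\infty\to L^6\to L^2$ Sobolev chain of Lemma~\ref{lem2-15-mai-2025}, the error $(s/t)^{1/2+\delta}|\Psi|_{\vec{n}}$ produced by $|L_au_k|$ is exactly what the $L^6$ estimate for $v_k$ controls (Lemma~\ref{prop1-16-mai-2025}). Your heuristic ``each spinor factor contributes one $(s/t)^\delta$'' does not capture this interplay, and without it the argument does not close. One minor point: in the merging--Euclidean region the paper uses the tangent radial field $\delb_r$ rather than Cartesian $\del_a$, and separately controls $\nabla_{\delb_r}\vec{n}$ via Proposition~\ref{lem1-11-aout-2025}, since $\delb_r$ is not a Minkowski Killing field.
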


The remaining of this section is devoted to the proof of the above estimates. Our strategy is to extend standard Klainerman-Sobolev inequalities to the \emph{gauge-invariant contraction} $\la \Psi,\vec{n}\cdot\Psi\ra_{\ourD}$ defined on a \emph{curved background}. The proof consists of two main steps. First, we compute the derivatives of $\la \Psi,\vec{n}\cdot\Psi\ra_{\ourD}$ with respect to the admissible vector fields $L_a,\Omega_{ab}$ and $\delb_r$. In a second step, we estimate the geometric error terms, and we find it necessary to introduce a regularization method before applying the Klainerman-Sobolev inequalities available for scalar-valued functions.

}

%-------------------------------------------------------------------------------------------------------------------------------------

\subsection{ Derivatives of the contraction $\la\Psi,\vec{n}\cdot\Psi\ra_{\ourD}$}
\label{section===92}

{ 

\paragraph{An identity.}

Our first task is the study of the function $\la\Psi,\vec{n}\cdot\Psi\ra_{\ourD}$ associated with an arbitrary spinor field. We begin with a technical identity involving the Clifford-corrected derivative \eqref{equa-511m}, namely
$\widehat{X}\Psi = \nabla_X\Psi - \frac{1}{4}g^{\alpha\gamma} \del_\alpha \cdot\nabla_{\gamma}X\cdot\Psi$.

\begin{lemma}
\label{lemma-19juillet2025-a}
For any sufficient regular spinor field $\Psi$ and any vector field $X$ defined in (a subset of) $\Mcal_{[s_0,s_1]}$, one has 
\begin{equation}\label{eq3-06-june-2025}
X \bigl( \la\Phi,\Psi\ra_{\ourD} \bigr)
= \la\widehat{X}\Phi,\Psi\ra_{\ourD} + \la\Phi,\widehat{X}\Psi\ra_{\ourD} 
- \frac{1}{2}\mathrm{div}(X)\la\Phi,\Psi\ra_{\ourD}.
\end{equation}
\end{lemma}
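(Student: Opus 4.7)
The plan is to rewrite both sides using the compatibility of the spin connection with the Dirac form, and then use the anti-symmetry of Clifford multiplication under $\la\cdot,\cdot\ra_{\ourD}$ together with the Clifford anti-commutation relation to identify the correction terms. The starting point is the compatibility identity recalled in Section~\ref{section===41}, namely
$$
X\bigl(\la\Phi,\Psi\ra_{\ourD}\bigr) = \la\nabla_X\Phi,\Psi\ra_{\ourD} + \la\Phi,\nabla_X\Psi\ra_{\ourD}.
$$
So the content of \eqref{eq3-06-june-2025} is entirely captured by the Clifford correction $-\tfrac{1}{4} g^{\alpha\gamma}\del_\alpha\cdot\nabla_\gamma X\cdot$ appearing in $\widehat{X}$.

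Next, I would substitute the defining formula $\widehat{X}\Phi = \nabla_X\Phi - \tfrac{1}{4}g^{\alpha\gamma}\del_\alpha\cdot\nabla_\gamma X\cdot\Phi$ (and similarly for $\Psi$) into the right-hand side of \eqref{eq3-06-june-2025}. Combined with the compatibility identity above, the problem reduces to showing
$$
-\tfrac{1}{4}g^{\alpha\gamma}\bigl(\la\del_\alpha\cdot\nabla_\gamma X\cdot\Phi,\Psi\ra_{\ourD} + \la\Phi,\del_\alpha\cdot\nabla_\gamma X\cdot\Psi\ra_{\ourD}\bigr) = -\tfrac{1}{2}\mathrm{div}(X)\,\la\Phi,\Psi\ra_{\ourD}.
$$
Applying the anti-symmetry property $\la Y\cdot\Phi,\Psi\ra_{\ourD} = -\la\Phi,Y\cdot\Psi\ra_{\ourD}$ twice on the first term, one rewrites the left-hand side as
$$
-\tfrac{1}{4}g^{\alpha\gamma}\la\Phi,(\nabla_\gamma X\cdot\del_\alpha + \del_\alpha\cdot\nabla_\gamma X)\cdot\Psi\ra_{\ourD}.
$$

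The key step is then to use the Clifford anti-commutation relation \eqref{clifford-anticomm}, applied to the two vectors $\nabla_\gamma X$ and $\del_\alpha$, which yields
$\nabla_\gamma X\cdot\del_\alpha + \del_\alpha\cdot\nabla_\gamma X = -2\,g(\nabla_\gamma X,\del_\alpha)$.
Substituting this scalar into the sesquilinear pairing and contracting with $g^{\alpha\gamma}$, the coefficient becomes
$$
g^{\alpha\gamma}\,g(\nabla_\gamma X,\del_\alpha) = g^{\alpha\gamma}g_{\alpha\beta}(\nabla_\gamma X)^\beta = (\nabla_\beta X)^\beta = \mathrm{div}(X),
$$
where I use $\nabla g = 0$ (so the metric passes freely through the covariant derivative). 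This produces exactly the factor $+\tfrac{1}{2}\mathrm{div}(X)\la\Phi,\Psi\ra_{\ourD}$ on the right-hand side, and rearranging gives \eqref{eq3-06-june-2025}.

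There is essentially no obstacle here: the proof is a two-line manipulation once the anti-symmetry property of $\la\cdot,\cdot\ra_{\ourD}$ under Clifford multiplication and the Clifford anti-commutation relation are both invoked. The only point requiring minor care is bookkeeping of signs when moving $\del_\alpha\cdot\nabla_\gamma X\cdot$ across the pairing (two sign flips, one for each Clifford factor), and the observation that the symmetrization forced by summing the two terms $\la\widehat X\Phi,\Psi\ra_{\ourD}+\la\Phi,\widehat X\Psi\ra_{\ourD}$ is precisely what produces the anti-commutator $\nabla_\gamma X\cdot\del_\alpha+\del_\alpha\cdot\nabla_\gamma X$, so that the pointwise Clifford identity can be applied.
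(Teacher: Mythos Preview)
Your proof is correct and follows the same route as the paper's: start from the compatibility of $\nabla$ with $\la\cdot,\cdot\ra_{\ourD}$, substitute the definition of $\widehat{X}$, move both Clifford factors across the pairing, and collapse the resulting anti-commutator via $\del_\alpha\cdot\nabla_\gamma X + \nabla_\gamma X\cdot\del_\alpha = -2g(\del_\alpha,\nabla_\gamma X)$ to produce $\mathrm{div}(X)$.

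Two small sign slips are worth flagging, though neither breaks the argument. First, your displayed reduction should read $+\tfrac{1}{2}\mathrm{div}(X)\la\Phi,\Psi\ra_{\ourD}$ on the right, not $-\tfrac{1}{2}$; you correctly compute $+\tfrac{1}{2}$ at the end, so this is just a typo. Second, the Dirac form is \emph{symmetric} under Clifford multiplication, $\la Y\cdot\Phi,\Psi\ra_{\ourD} = +\la\Phi,Y\cdot\Psi\ra_{\ourD}$, not anti-symmetric as you wrote; since you apply the rule twice the signs cancel and your conclusion is unaffected, but the stated property is incorrect.
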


\begin{proof} By a direct calculation, we have 
$$
\aligned
X \bigl( \la \Phi,\Psi\ra_{\ourD} \bigr)  
& =  \la\nabla_X\Phi,\Psi\ra_{\ourD} + \la\Phi,\nabla_X\Psi\ra_{\ourD}
\\
& =  \la\widehat{X}\Phi,\Psi \ra_{\ourD} + \la\Phi,\widehat{X}\Psi\ra_{\ourD} 
+ \frac{1}{4}g^{\alpha\gamma}\la\del_{\alpha}\cdot\nabla_{\gamma}X\cdot\Phi,\Psi \ra_{\ourD}
+ \frac{1}{4}g^{\alpha\gamma}\la\Phi,\del_{\alpha}\cdot\nabla_{\gamma}X\cdot\Psi \ra_{\ourD}
\\
& =  \la\widehat{X}\Phi,\Psi \ra_{\ourD} + \la\Phi,\widehat{X}\Psi\ra_{\ourD} 
+ \frac{1}{4}g^{\alpha\gamma}\big\la\Phi,(\nabla_{\gamma}X\cdot\del_{\alpha} + \del_{\alpha}\cdot\nabla_{\gamma}X)\cdot\Psi \big\ra_{\ourD}
\\
& =  \la\widehat{X}\Phi,\Psi \ra_{\ourD} + \la\Phi,\widehat{X}\Psi\ra_{\ourD} 
- \frac{1}{2}g^{\alpha\gamma}(\del_{\alpha},\nabla_{\gamma}X)_g\la \Phi,\Psi\ra_{\ourD}, 
\endaligned
$$
in which $g^{\alpha\gamma}(\del_\alpha,\nabla_\gamma X)_g = \mathrm{div}(X)$. 
\end{proof}

In particular, in the flat case, the boosts and rotations are divergence-free, that is, 
\begin{equation}
\mathrm{div}_\eta(L_a) = \mathrm{div}_\eta(\Omega_{ab}) = 0 \qquad \text{(flat geometry)}, 
\end{equation}
and we thus have the following result.

\begin{lemma}\label{lem1-09-june-2025}
Suppose that $g=\eta$. 
For the boost vector fields, one has 
\begin{equation}\label{eq1-18-aout-2025}
\widehat{L_a}\Psi = \nabla_{L_a}\Psi + \frac{1}{2}\del_0\cdot\del_a\cdot\Psi 
\qquad \text{(flat geometry)}, 
\end{equation}
\begin{equation}\label{eq7-27-july-2025}
\widehat{\Omega_{ab}}\Psi = \nabla_{\Omega_{ab}}\Psi - \frac{1}{2}\del_a\cdot\del_b\cdot\Psi 
\qquad \text{(flat geometry)}, 
\end{equation}
together with 
\begin{equation}\label{eq3-09-june-2025}
L_a \bigl( \la \Psi,\vec{n}\cdot\Psi\ra_{\ourD} \bigr) 
= \la \widehat{L_a}\Psi,\vec{n}\cdot\Psi\ra_{\ourD} + \la \Psi,\vec{n}\cdot\widehat{L_a}\Psi\ra_{\ourD}
\qquad \text{(flat geometry)}, 
% \quad \text{ in } \Mcal^{\Hcal}_{[s_0,s_1]}.
\end{equation}
\begin{equation}\label{eq1-28-july-2025}
\Omega_{ab} \bigl( \la \Psi,\vec{n}\cdot\Psi\ra_{\ourD} \bigr) 
= \la\widehat{\Omega_{ab}}\Psi,\vec{n}\cdot\Psi \ra_{\ourD}
+\la\Psi,\vec{n}\cdot\widehat{\Omega_{ab}}\Psi\ra_{\ourD}
\qquad \text{(flat geometry)}. 
% \quad \text{ in } \Mcal_{[s_0,s_1]}.
\end{equation}
\end{lemma}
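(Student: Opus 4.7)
The four identities split naturally into two direct computations for \eqref{eq1-18-aout-2025}--\eqref{eq7-27-july-2025} and two Leibniz-type derivations for \eqref{eq3-09-june-2025}--\eqref{eq1-28-july-2025}. First, I would establish the explicit Clifford corrections \eqref{eq1-18-aout-2025} and \eqref{eq7-27-july-2025} by unpacking the definition \eqref{equa-511m} of $\widehat{Z}$; then I would derive the derivative identities from Lemma~\ref{lemma-19juillet2025-a} combined with the Leibniz rule \eqref{eq4-04-july-2025}, the crucial input being that the normal $\vec{n}_\eta$ is annihilated by $\widehat{L_a}$ and $\widehat{\Omega_{ab}}$.

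For \eqref{eq1-18-aout-2025} and \eqref{eq7-27-july-2025}, the key observation is that in flat Minkowski the Christoffel symbols vanish, so $\nabla_\gamma X = (\del_\gamma X^\alpha)\del_\alpha$ for any coordinate-expressed vector field $X$. Writing $L_a = x^0\del_a + x^a\del_0$, one obtains $\nabla_0 L_a = \del_a$ and $\nabla_c L_a = \delta_{ac}\del_0$, and hence
\[
\eta^{\alpha\gamma}\del_\alpha\cdot\nabla_\gamma L_a = -\del_0\cdot\del_a + \del_a\cdot\del_0 = -2\,\del_0\cdot\del_a,
\]
where in the last step one uses the Clifford anticommutation $\del_0\cdot\del_a + \del_a\cdot\del_0 = -2\eta_{0a} = 0$. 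Substituting this into \eqref{equa-511m} yields \eqref{eq1-18-aout-2025}. The same mechanism applied to $\Omega_{ab} = x^a\del_b - x^b\del_a$ gives $\nabla_d\Omega_{ab} = \delta^a_d\del_b - \delta^b_d\del_a$, hence $\eta^{\alpha\gamma}\del_\alpha\cdot\nabla_\gamma\Omega_{ab} = \del_a\cdot\del_b - \del_b\cdot\del_a = 2\,\del_a\cdot\del_b$ (using $a\neq b$ and $\eta_{ab}=0$), which produces \eqref{eq7-27-july-2025}.

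For the two derivative identities, I would exploit the fact that $L_a$ and $\Omega_{ab}$ are Killing vector fields of $\eta$, so $\mathrm{div}(L_a) = \mathrm{div}(\Omega_{ab}) = 0$ and $\pi[L_a] = \pi[\Omega_{ab}] = 0$. Applying Lemma~\ref{lemma-19juillet2025-a} with $\Phi = \Psi$ and $X\in\{L_a,\Omega_{ab}\}$ gives
\[
Z\bigl(\la\Psi,\vec{n}\cdot\Psi\ra_{\ourD}\bigr)
= \la\widehat{Z}\Psi,\vec{n}\cdot\Psi\ra_{\ourD} + \la\Psi,\widehat{Z}(\vec{n}\cdot\Psi)\ra_{\ourD},
\]
and the Leibniz rule \eqref{eq4-04-july-2025} for the adapted action expands the second term as $\la\Psi,\widehat{Z}\vec{n}\cdot\Psi\ra_{\ourD} + \la\Psi,\vec{n}\cdot\widehat{Z}\Psi\ra_{\ourD}$.

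The hard part --- and the only remaining step --- is to verify that $\widehat{Z}\vec{n}_\eta = 0$ for $Z\in\{L_a,\Omega_{ab}\}$. By Lemma~\ref{lem4-06-oct-2025(l)}, since $\pi[Z]=0$, one has $\widehat{Z}\vec{n}_\eta = \Lcal_Z\vec{n}_\eta = [Z,\vec{n}_\eta]$. For the rotations $\Omega_{ab}$ this is immediate, since $\vec{n}_\eta = \zeta^{-1}\del_t + \zeta^{-1}\delb_rT\,(x^a/r)\del_a$ is spherically symmetric (the functions $T(s,r)$ and $\zeta$ depend only on $t$ and $r=|x|$). For the boosts, within the hyperboloidal region one has $\zeta = s/t$ and $\delb_rT = r/t$, so $\vec{n}_\eta = s^{-1}S$ with $S = t\del_t + x^b\del_b$; using the Minkowski bracket identities $[L_a,\del_t]=-\del_a$ and $[L_a,\del_b]=-\delta_{ab}\del_t$, a short computation gives $[L_a,S]=0$, while $L_a(t^2-r^2) = 2tx^a - 2x^b\delta_{ab}t = 0$ yields $L_a(s)=0$, whence $[L_a,\vec{n}_\eta]=0$. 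Substituting these vanishings into the expression of the previous paragraph delivers \eqref{eq3-09-june-2025} and \eqref{eq1-28-july-2025}.
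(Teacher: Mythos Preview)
Your proof is correct, and for the first two identities \eqref{eq1-18-aout-2025}--\eqref{eq7-27-july-2025} it coincides with the paper's ``direct calculation''. For the derivative identities \eqref{eq3-09-june-2025}--\eqref{eq1-28-july-2025}, however, you take a genuinely different route. The paper expands $L_a\bigl(\la\Psi,\vec{n}\cdot\Psi\ra_{\ourD}\bigr)$ via the compatibility of the spin connection with the Dirac form and the Clifford product, obtaining a $\la\Psi,\nabla_{L_a}\vec{n}\cdot\Psi\ra_{\ourD}$ term with $\nabla_{L_a}\vec{n}=(t/s)\delb_a$, and then establishes the Clifford identity $\vec{n}\cdot\del_t\cdot\del_a + \del_a\cdot\del_t\cdot\vec{n} = 2(t/s)\delb_a$ to absorb this into the $\widehat{L_a}$ correction terms on both sides. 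Your approach instead invokes Lemma~\ref{lemma-19juillet2025-a}, the Leibniz rule \eqref{eq4-04-july-2025}, and Lemma~\ref{lem4-06-oct-2025(l)} to reduce everything to the purely vectorial statement $\widehat{Z}\vec{n}_\eta = [Z,\vec{n}_\eta] = 0$, which you then verify by recognizing $\vec{n}_\eta = s^{-1}S$ in the hyperboloidal region. Your argument is more structural---it is essentially the flat-Killing specialization of the error-vector formula in Proposition~\ref{prop1-08-june-2025}, showing that $\vec{w}[Z]=0$---and avoids any Clifford manipulation beyond what is already packaged in the adapted-action machinery. The paper's computation is more self-contained and makes the cancellation explicit at the Clifford level.
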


\begin{proof}
\bse
In flat spacetime, the Levi-Civita connection reduces to partial derivatives, namely $\nabla_X = X^\mu \del_\mu$. The identity \eqref{eq1-18-aout-2025} is checked by a direct calculation. Then thanks to Lemma~\ref{lemma-19juillet2025-a}, we have 
\be
\aligned
L_a\big(\la \Psi,\vec{n}\cdot\Psi\ra_{\ourD}\big)
& =  \la\nabla_{L_a}\Psi,\vec{n}\cdot\Psi \ra_{\ourD} 
+ \la \Psi,\nabla_{L_a}\vec{n}\cdot\Psi\ra_{\ourD}
+ \la \Psi,\vec{n}\cdot\nabla_{L_a}\Psi\ra_{\ourD}
\\
& =  \la\nabla_{L_a}\Psi,\vec{n}\cdot\Psi \ra_{\ourD} 
+ (t/s)\la \Psi,\delb_a\cdot\Psi\ra_{\ourD}
+ \la \Psi,\vec{n}\cdot\nabla_{L_a}\Psi\ra_{\ourD}
\\
& =  \la \nabla_{L_a}\Psi,\vec{n}\cdot\Psi\ra_{\ourD} +\frac{1}{2}\la\Psi,\del_a\cdot\del_t\cdot\vec{n}\cdot\Psi\ra_{\ourD}
\\
& \quad+ \la \Psi,\vec{n}\cdot\nabla_{L_a}\Psi\ra_{\ourD} 
+ \frac{1}{2}\la\Psi,\vec{n}\cdot\del_t\cdot\del_a\cdot\Psi\ra_{\ourD}
\\
& 
=\la\widehat{L_a}\Psi,\vec{n}\cdot\Psi \ra_{\ourD} 
+ \la\Psi,\vec{n}\cdot\widehat{L_a}\Psi \ra_{\ourD}.
\endaligned
\ee
Here, for the second equality we used $\nabla_{L_a}\vec{n} = (t/s)\delb_a$, while for the third equality we applied the identity
\be
\aligned
\vec{n}\cdot\del_t\cdot\del_a + \del_a\cdot\del_t\cdot\vec{n}
& =  - \del_t\cdot\vec{n}\cdot\del_a + 2(t/s)\del_a +  \del_a\cdot\del_t\cdot\vec{n}
\\
& =  \del_t\cdot\del_a\cdot\vec{n} + 2(x^a/s)\del_t + 2(t/s)\del_a +  \del_a\cdot\del_t\cdot\vec{n}
\\
& = 2(t/s)\big(\del_a + (x^a/t)\del_t\big) = 2(t/s)\delb_a.
\endaligned
\ee
The calculation concerning the rotation fields $\Omega_{ab}$ is similar. Finally, we also have the identity
\begin{equation}
\nabla_{\Omega_{ab}}\vec{n} = s^{-1}\Omega_{ab},\quad\del_b\cdot\del_a\cdot\vec{n} + \vec{n}\cdot\del_a\cdot\del_b = -2s^{-1}\Omega_{ab}.
\end{equation}
\ese
\end{proof}
\begin{remark}
As far as we are concerned, the relation \eqref{eq3-09-june-2025} is firstly remarked in \cite[Lemma~3.2]{DW} in the setting of $\RR^{1+2}$ with the canonical tetrad $\{\del_{\alpha}\}$. 
\end{remark}

%---------------------------------------------------------------------------------

\paragraph{Geometric error and error vector.}

In the flat case and in the hyperboloidal region, the identity \eqref{eq3-09-june-2025} is sufficient in order to establish Theorem~\ref{thm2-05-april-2025}. However, for the curved case, a further observation is required, as follows. Here, we find it convenient to refer to the underlined term $\la\Psi,\vec{w}\cdot\Psi \ra_{\ourD}$ in \eqref{eq10-30-june-2025}, below, as the \textit{ geometric error} and to the vector $\vec{w}[X]$ in \eqref{eq2-26-june-2025}, below, as the \textit{error vector}.

\begin{proposition}\label{prop1-08-june-2025}
Let $\vec{n}$ be the future oriented normal vector to a slice $\Mcal_s$ and $X$ be any tangent vector. Then one has 
\begin{equation}\label{eq10-30-june-2025}
X \bigl( \la\Psi,\vec{n}\cdot\Psi\ra_{\ourD} \bigr) 
= \la\widehat{X}\Psi,\vec{n}\cdot\Psi\ra_{\ourD} + \la\Psi,\vec{n}\cdot\widehat{X}\Psi\ra_{\ourD}
+ \underline{\la\Psi,\vec{w}[X]\cdot\Psi \ra_{\ourD}}, 
\end{equation} 
the vector $\vec{w}[X]$ being defined as 
\begin{equation}\label{eq2-26-june-2025}
\vec{w}[X] = \frac{1}{2}([X,\vec{n}])^{\top}
- \frac{1}{4}g^{\alpha\beta}\pi[X]_{\alpha\beta} \, \vec{n}, 
\end{equation}
where $(X)^{\top}$ denotes the orthogonal component of the orthogonal decomposition of $X$ with respect to $\vec{n}$ (cf.~\eqref{eq4-27-june-2025}), and $\pi[X]$ is the deformation tensor of $X$.  
\end{proposition}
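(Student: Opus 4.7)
The plan is to start directly from Leibniz and the definition of the Clifford-corrected derivative. Applying the ordinary covariant-derivative Leibniz rule,
\[
X\bigl(\la\Psi,\vec n\cdot\Psi\ra_{\ourD}\bigr)
= \la\nabla_X\Psi,\vec n\cdot\Psi\ra_{\ourD}
+ \la\Psi,(\nabla_X\vec n)\cdot\Psi\ra_{\ourD}
+ \la\Psi,\vec n\cdot\nabla_X\Psi\ra_{\ourD},
\]
and then substituting $\nabla_X\Psi=\widehat{X}\Psi+\tfrac14 g^{\alpha\gamma}\del_\alpha\cdot\nabla_\gamma X\cdot\Psi$ on both sides and using the skew-adjointness property $\la Y\cdot\Phi,\Xi\ra_{\ourD}=-\la\Phi,Y\cdot\Xi\ra_{\ourD}$ of the Dirac form under Clifford multiplication by a vector (applied twice), I will read off
\[
\vec w[X]
:=\frac14\,g^{\alpha\gamma}\bigl(\nabla_\gamma X\cdot\del_\alpha\cdot\vec n+\vec n\cdot\del_\alpha\cdot\nabla_\gamma X\bigr)+\nabla_X\vec n
\]
as the full geometric error term.

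The next step is an algebraic simplification in the Clifford algebra. Using the identity
$Y\cdot A\cdot B+B\cdot A\cdot Y=-2g(A,B)\,Y+2g(Y,B)\,A-2g(Y,A)\,B$
(obtained by applying the Clifford anti-commutation $U\cdot V+V\cdot U=-2g(U,V)$ three times) to $Y=\nabla_\gamma X$, $A=\del_\alpha$, $B=\vec n$, and contracting with $\tfrac14 g^{\alpha\gamma}$, I arrive at the decomposition
\[
\vec w[X]=-\tfrac12\nabla_{\vec n}X+\tfrac12\,Z-\tfrac12\,\mathrm{div}(X)\vec n+\nabla_X\vec n,
\qquad Z:=g^{\alpha\gamma}g(\nabla_\gamma X,\vec n)\,\del_\alpha,
\]
where the middle term contracts cleanly to $\mathrm{div}(X)$ via $g^{\alpha\gamma}g(\nabla_\gamma X,\del_\alpha)=\nabla^\alpha X_\alpha$, matching the factor $\tfrac14 g^{\alpha\beta}\pi[X]_{\alpha\beta}=\tfrac12\,\mathrm{div}(X)$ predicted by the statement.

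The nontrivial identity to verify is therefore
\[
Z+\nabla_X\vec n \;=\; g\bigl([X,\vec n],\vec n\bigr)\,\vec n,
\]
since this reduces $\vec w[X]$ to $\tfrac12[X,\vec n]+\tfrac12\,g([X,\vec n],\vec n)\vec n-\tfrac12\,\mathrm{div}(X)\vec n$, which is exactly $\tfrac12([X,\vec n])^\top-\tfrac14 g^{\alpha\beta}\pi[X]_{\alpha\beta}\vec n$ by the decomposition $V^\top=V+g(V,\vec n)\vec n$. This is the step I expect to be the main obstacle, and it is precisely where the hypothesis that $\vec n$ is the unit normal to the foliation enters (as opposed to merely being a unit timelike vector field). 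Using $g(X,\vec n)=0$ and $g(\vec n,\vec n)=-1$, I will rewrite $Z$ as $-X^\mu\nabla^\alpha n_\mu\,\del_\alpha$, so that
\[
(Z+\nabla_X\vec n)^\alpha=2X^\mu g^{\alpha\beta}\nabla_{[\mu}n_{\beta]},
\]
i.e.\ the symmetric part of $\nabla n$ drops out and only the 2-form $dn^\flat$ contributes. The Frobenius/integrability property of the foliation $\{\Mcal_s\}$ gives $n^\flat=\lambda\,ds$ for a lapse-like function $\lambda$, hence $\nabla_{[\mu}n_{\beta]}=\tfrac1{2\lambda}\bigl((\del_\mu\lambda)n_\beta-(\del_\beta\lambda)n_\mu\bigr)$. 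Contracting against $X^\mu$ and using $g(X,\vec n)=0$ kills the second piece and leaves $(Z+\nabla_X\vec n)^\alpha=(X\lambda/\lambda)\,n^\alpha$. A direct computation of $\nabla_{\vec n}\vec n$ with $n_\alpha=\lambda\,\del_\alpha s$, combined with the normalization $n^\mu\del_\mu s=-1/\lambda$, yields $g(X,\nabla_{\vec n}\vec n)=X\lambda/\lambda$; and the basic identity $g([X,\vec n],\vec n)=g(X,\nabla_{\vec n}\vec n)$ (which follows from $\vec n(g(X,\vec n))=0$ together with $n_\alpha\nabla_\mu n^\alpha=0$) then closes the computation and gives the stated formula for $\vec w[X]$.
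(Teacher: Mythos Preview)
Your proof is correct. The first half of your argument---arriving at the Clifford expression $\vec w[X]=\nabla_X\vec n+\tfrac14 g^{\alpha\gamma}(\nabla_\gamma X\cdot\del_\alpha\cdot\vec n+\vec n\cdot\del_\alpha\cdot\nabla_\gamma X)$ and reducing it via the triple-product identity to $-\tfrac12\nabla_{\vec n}X+\tfrac12 Z-\tfrac12\,\mathrm{div}(X)\,\vec n+\nabla_X\vec n$---matches the paper's computation exactly (the paper factors this through an auxiliary lemma for $X(\la\Phi,\Psi\ra_{\ourD})$, but the content is the same). One minor point: in this paper's convention the Dirac form is \emph{self}-adjoint under Clifford multiplication by a vector, not skew-adjoint; fortunately this makes no difference here since you apply the identity twice.

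Where you genuinely diverge from the paper is the final step, identifying $Z+\nabla_X\vec n$. The paper passes to the tangent-orthogonal frame $\{\vec n,\delb_a\}$ and uses that $[\delb_c,X]$ is tangent to the slice whenever $X$ is, so $(\vec n,\nabla_{\delb_c}X)_g=-(\nabla_X\vec n,\delb_c)_g$; this quickly yields $\tfrac12 Z=\tfrac12(\vec n,[X,\vec n])_g\vec n-\tfrac12\nabla_X\vec n$ and the result follows. Your route is instead coordinate-free: you observe that $(Z+\nabla_X\vec n)^\alpha=2X^\mu g^{\alpha\beta}\nabla_{[\mu}n_{\beta]}$ involves only $dn^\flat$, then exploit the hypersurface orthogonality $n^\flat=\lambda\,ds$ to compute $dn^\flat=\lambda^{-1}d\lambda\wedge n^\flat$ directly, and finally match $X\lambda/\lambda$ with $g([X,\vec n],\vec n)$ via the acceleration $\nabla_{\vec n}\vec n$. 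Both approaches use the same underlying input---$X$ tangent and the Frobenius/integrability of the foliation---but yours makes the role of hypersurface orthogonality more explicit, while the paper's frame computation is shorter once the adapted frame is in hand.
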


\begin{proof} {\bf 1. A decomposition.} 
\bse
Thanks to the identity \eqref{eq3-06-june-2025} for a general contraction $X \bigl( \la\Phi,\Psi\ra_{\ourD} \bigr)$, we can write 
\be
\aligned
X \bigl( \la \Psi,\vec{n}\cdot\Psi\ra_{\ourD} \bigr) 
& =  \la \widehat{X}\Psi,\vec{n}\cdot\Psi\ra_{\ourD} 
+ \big\la\Psi,\widehat{X}(\vec{n}\cdot\Psi)\big\ra_{\ourD}
- \frac{1}{2}\mathrm{div}(X)\la\Psi,\vec{n}\cdot\Psi\ra_{\ourD}
\\
& = \la \widehat{X}\Psi,\vec{n}\cdot\Psi\ra_{\ourD} 
+ \big\la\Psi,\nabla_X(\vec{n}\cdot\Psi)\big\ra_{\ourD}
- \frac{1}{4}g^{\alpha\gamma}\la\Psi,\del_{\alpha}\cdot\nabla_{\gamma}X\cdot\vec{n}\cdot\Psi\ra_{\ourD}
\\
& \quad 
- \frac{1}{2}\mathrm{div}(X)\la\Psi,\vec{n}\cdot\Psi\ra_{\ourD}
\\
& = \la\widehat{X}\Psi,\vec{n}\cdot\Psi\ra_{\ourD}
+ \la\Psi,\nabla_X\vec{n}\cdot\Psi\ra_{\ourD}
+ \la\Psi,\vec{n}\cdot\nabla_X\Psi\ra_{\ourD}
\\
& \quad - \frac{1}{2}\mathrm{div}(X)\la\Psi,\vec{n}\cdot\Psi\ra_{\ourD}
- \frac{1}{4}g^{\alpha\gamma}\la\Psi,\del_{\alpha}\cdot\nabla_{\gamma}X\cdot\vec{n}\cdot\Psi\ra_{\ourD}, 
\endaligned
\ee
therefore
\be
\aligned
X\la \Psi,\vec{n}\cdot\Psi\ra_{\ourD} 
& = \la\widehat{X}\Psi,\vec{n}\cdot\Psi\ra_{\ourD}
+ \la\Psi,\vec{n}\cdot\widehat{X}\Psi\ra_{\ourD}
- \frac{1}{2}\mathrm{div}(X)\la\Psi,\vec{n}\cdot\Psi\ra_{\ourD}
\\
& \quad + \la\Psi,\nabla_X\vec{n}\cdot\Psi\ra_{\ourD}
+\frac{1}{4}g^{\alpha\gamma}\la\Psi,\vec{n}\cdot\del_{\alpha}\cdot\nabla_{\gamma}X\cdot\Psi\ra_{\ourD}
- \frac{1}{4}g^{\alpha\gamma}\la\Psi,\del_{\alpha}\cdot\nabla_{\gamma}X\cdot\vec{n}\cdot\Psi\ra_{\ourD}. 
\endaligned
\ee
Then, we recall that
\be
\aligned
- \frac{1}{4}g^{\alpha\gamma}\la\Psi,\del_{\alpha}\cdot\nabla_{\gamma}X\cdot\vec{n}\cdot\Psi\ra_{\ourD}
& = 
\frac{1}{4}g^{\alpha\gamma}\la\Psi,\nabla_{\gamma}X\cdot\del_{\alpha}\cdot\vec{n}\cdot\Psi\ra_{\ourD}
+ \frac{1}{2}g^{\alpha\gamma}(\del_{\alpha},\nabla_{\gamma}X)_g\la\Psi,\vec{n}\cdot\Psi\ra_{\ourD}
\\
& =  \frac{1}{4}g^{\alpha\gamma}\la\Psi,\nabla_{\gamma}X\cdot\del_{\alpha}\cdot\vec{n}\cdot\Psi\ra_{\ourD}
+ \frac{1}{2}\mathrm{div}(X)\la\Psi,\vec{n}\cdot\Psi\ra_{\ourD}.
\endaligned
\ee
Thus by introducing the vector 
\begin{equation}\label{eq1-26-june-2025}
\vec{w} = \nabla_X\vec{n} + \frac{1}{4}g^{\alpha\gamma}\vec{n}\cdot\del_{\alpha}\cdot\nabla_{\gamma}X
+\frac{1}{4}g^{\alpha\gamma}\nabla_{\gamma}X\cdot\del_{\alpha}\cdot\vec{n}, 
\end{equation}
we arrive at  
\begin{equation}
X\la \Psi,\vec{n}\cdot\Psi\ra_{\ourD} 
= \la\widehat{X}\Psi,\vec{n}\cdot\Psi\ra_{\ourD}
+ \la\Psi,\vec{n}\cdot\widehat{X}\Psi\ra_{\ourD}
+ \la\Psi,\vec{w}\cdot\Psi\ra_{\ourD}.
\end{equation}
In the expression \eqref{eq1-26-june-2025}, it remains to compute 
\be
\vec{n}\cdot\del_{\alpha}\cdot\nabla_{\gamma}X + \nabla_{\gamma}X\cdot\del_{\alpha}\cdot\vec{n}
= 2(\vec{n},\nabla_{\gamma}X)_g\del_{\alpha} 
- 2(\vec{n},\del_{\alpha})_g\nabla_{\gamma}X
- 2(\del_{\alpha},\nabla_{\gamma}X)_g\vec{n}, 
\ee
which leads us to the following expression of $\vec{w}$: 
\begin{equation}\label{eq1-06-june-2025}
\aligned
\vec{w} 
& = \nabla_X\vec{n} + \frac{1}{2}g^{\alpha\gamma}(\vec{n},\nabla_{\gamma}X)_g\del_{\alpha}
- \frac{1}{2}g^{\alpha\gamma}(\vec{n},\del_{\alpha})_g\nabla_{\gamma}X
- \frac{1}{2}g^{\alpha\gamma}(\del_{\alpha},\nabla_{\gamma}X)_g\vec{n}
\\
& =: \nabla_X\vec{n} + A_2 + A_3 + A_4. 
\endaligned
\end{equation}
\ese

\vskip.3cm

\noindent{\bf 2. Analysis of the terms $A_2,A_3,A_4$.} 
\bse
Next, we analyze the second term $A_2$ in the right-hand side of \eqref{eq1-06-june-2025} and work with the tangent-orthogonal frame (cf. \eqref{eq4-12-june-2025}). The components of $g$ in the frame $\{w_{\alpha}\}$ are denoted by $\gch_{\alpha\beta}$ and $\gch^{\alpha\beta}$ with
\begin{equation}\label{eq1-11-june-2025}
\gch^{00} = \gch_{00} = -1, \qquad \gch_{0a} = \gch^{0a} = 0, \qquad \gch^{ab} = \sigmab^{ab}, \qquad \gch_{ab} = \gb_{ab}.
\end{equation}
We thus find 
$$ 
\aligned
A_2 
& = 
\frac{1}{2}\gch^{\alpha\gamma}(\vec{n},\nabla_{\check{\del}_{\gamma}}X)_gw_{\alpha}
=
\frac{1}{2}\gch^{00}(\vec{n},\nabla_{\vec{n}}X)_g\vec{n} 
+ \frac{1}{2}\sigmab^{ac}(\vec{n},\nabla_{\delb_c}X)_g\delb_a
\\
& = \frac{1}{2}\gch^{00}(\vec{n},\nabla_{\vec{n}}X)_g\vec{n}
+ \frac{1}{2}\sigmab^{ac}(\vec{n},\nabla_X\delb_c)_g\delb_a
+ \frac{1}{2}\sigmab^{ac}(\vec{n},[\delb_c,X])\delb_a
\\
& =  \frac{1}{2}\gch^{00}(\vec{n},\nabla_{\vec{n}}X)_g\vec{n}
- \frac{1}{2}\sigmab^{ac}(\nabla_X\vec{n},\delb_c)_g\delb_a
+ 0, 
\endaligned
$$
therefore 
\begin{equation}\label{eq1-09-june-2025}
\aligned
A_2
& =  \frac{1}{2}\gch^{00}(\vec{n},\nabla_{\vec{n}}X)_g\vec{n}
- \frac{1}{2}\sigmab^{ac}(\nabla_X\vec{n},\delb_c)_g\delb_a
=: A_{2,1} + A_{2,2}. 
\endaligned
\end{equation}
For the first term $A_{2,1}$ above, we observe that
\be
A_{2,1} 
=
- \frac{1}{2}(\vec{n}, \nabla_{\vec{n}}X)_g\vec{n} 
= 
- \frac{1}{2}(\vec{n},\nabla_X\vec{n})_g\vec{n} + \frac{1}{2}(\vec{n},[X,\vec{n}])_g\vec{n} 
=
\frac{1}{2}(\vec{n},[X,\vec{n}])_g\vec{n}.
\ee
For the second term $A_{2,2}$ \eqref{eq1-09-june-2025}, we set 
$\nabla_X\vec{n} =: B^b\delb_b$ and obtain 
\be
\sigmab^{ac}(\nabla_X\vec{n},\delb_c)_g\delb_a = B^b\gb_{bc}\sigmab^{ac}\delb_a = B^a\delb_a = \nabla_X\vec{n}, 
\ee
thus
\be
A_2 = \frac{1}{2}g^{\alpha\gamma}(\vec{n},\nabla_{\gamma}X)_g\del_{\alpha}
=
\frac{1}{2}(\vec{n},[X,\vec{n}])_g\vec{n}
- \frac{1}{2}\nabla_X\vec{n}. 
\ee

For the third term $A_3$, we have 
\be
A_3 = - \frac{1}{2}g^{\alpha\gamma}(\vec{n},\del_{\alpha})_g\nabla_{\gamma}X
=
- \frac{1}{2}\nabla_{\vec{n}}X = - \frac{1}{2}\nabla_X\vec{n} + \frac{1}{2}[X,\vec{n}].
\ee
Finally, for the last term in \eqref{eq1-06-june-2025}, we write 
\be
A_4 = - \frac{1}{2}g^{\alpha\gamma}(\del_{\alpha},\nabla_{\gamma}X)_g\vec{n}
= 
- \frac{1}{4}g^{\alpha\gamma}
\big((\del_{\alpha},\nabla_{\gamma}X)_g + (\nabla_{\alpha}X,\del_{\gamma})_g\big)\vec{n}
=
- \frac{1}{4}g^{\alpha\gamma}\pi[X]_{\alpha\gamma}\vec{n}, 
\ee
where $\pi[X] = \Lcal_Xg$ is the deformation tensor of $X$. Therefore, we have reached  
\begin{equation}
\vec{w} = \frac{1}{2}\big([X,\vec{n}] + (\vec{n},[X,\vec{n}])_g\vec{n}\big)
- \frac{1}{4}g^{\alpha\gamma}\pi[X]_{\alpha\gamma}\vec{n}, 
\end{equation}
which coincides with \eqref{eq2-26-june-2025}. 
\ese
\end{proof}

}

%--------------------------------------------------------------------------------------------------------------------------------

\subsection{ The Lorentzian rotation vectors}

{ 

\paragraph{Analysis of the deformation tensors.}

We now focus on the error vector $\vec{w}$ acting on the Lorentzian rotations $L_a,\Omega_{ab}$. We recall that, for an arbitrary vector field $X$, we consider 
\begin{equation}
\pi[X]_{\alpha\beta} = (\Lcal_Xg)_{\alpha\beta} = X(g_{\alpha\beta}) - g([X,\del_{\alpha}],\del_{\beta}) - g(\del_{\alpha},[X,\del_{\beta}]).
\end{equation}
Observe that
\be
X(g_{\alpha\beta}) = g(\nabla_X\del_{\alpha},\del_{\beta}) + g(\del_{\alpha},\nabla_X\del_{\beta}).
\ee
and the explicit expression
\begin{equation}
\label{eq3-22-june-2025}
\pi[X]_{\alpha\beta} = g(\nabla_\alpha X,\del_{\beta}) + g(\del_{\alpha},\nabla_{\beta}X)
=\nabla_{\alpha}X^{\gamma}g_{\beta\gamma} + \nabla_{\beta}X^{\gamma}g_{\alpha\gamma}.
\end{equation}
We are interested particularly in $\pi[L_a]_{\alpha\beta}, \pi[\Omega_{ab}]_{\alpha\beta}$ and $\pi[\del_{\gamma}]_{\alpha\beta}$. It is well known that, since they are Killing vectors for the Minkowski metric 
\be
\pi[L_a]_{\alpha\beta} = \pi[\Omega_{ab}]_{\alpha\beta} = \pi[\del_{\gamma}]_{\alpha\beta} = 0.
\ee 
Let us spell out the components of the deformation tensors associated with $L_a,\Omega_{ab}$ and $\del_{\gamma}$ in  the natural frame: 
\begin{equation}\label{eq7-30-june-2025}
\aligned
\pi[L_a]_{\alpha\beta} & = \, L_aH_{\alpha\beta} 
+ \delta_{\alpha0}H_{a\beta} + \delta_{\beta0}H_{\alpha a} 
+ \delta_{\beta a}H_{\alpha 0} + \delta_{\alpha a}H_{0\beta},
\\
\pi[\Omega_{ab}]_{\alpha\beta} & = \, \Omega_{ab}H_{\alpha\beta} 
+ \delta_{\alpha a}H_{b\beta} + \delta_{\beta a}H_{\alpha b}
- \delta_{\alpha b}H_{a\beta} - \delta_{\beta b}H_{\alpha a},
\\
\pi[\del_{\gamma}]_{\alpha\beta} & = \, \del_{\gamma}H_{\alpha\beta}, 
\endaligned
\end{equation}
and, in the adapted frame, 
\begin{equation}\label{eq6-02-july-2025}
\aligned
\overline{\pi}[L_a]_{\alpha\beta} & = \, L_a(\Hb_{\alpha\beta}) + \delb_{\alpha}T\Hb_{a\beta} + \delb_{\beta}T\Hb_{\alpha a},\quad && \text{ in }  \Mcal^{\Hcal}_{[s_0,s_1]},
\\
\overline{\pi}[\Omega_{ab}]_{\alpha\beta} & = \, \Omega_{ab}(\Hb_{\alpha\beta})
+ \delta_{\alpha a}\Hb_{b\beta} + \delta_{\beta a}\Hb_{\alpha b}
- \delta_{\alpha b}\Hb_{a\beta} - \delta_{\beta b}\Hb_{\alpha a},\quad &&  \text{ in }  \Mcal_{[s_0,s_1]},
\\
\overline{\pi}[\del_{\gamma}]_{\alpha\beta} & = \,\del_{\gamma}\Hb_{\alpha\beta} 
+ \Hb_{\beta\delta}\delb_{\alpha}\Psib_{\gamma}^{\delta}
+ \Hb_{\alpha\delta}\delb_{\beta}\Psib_{\gamma}^{\delta},\quad && \text{ in }  \Mcal_{[s_0,s_1]}.
\endaligned
\end{equation}

The above expressions directly lead to the following estimates.

\begin{lemma}\label{lem4-08-aout-2025}
Assume that the condition \eqref{eq-USA-condition} holds for a sufficiently small $\eps_s$. Then
one has 
\begin{equation}
\aligned
\big|g^{\alpha\beta}\pi[L_a]_{\alpha\beta}\big|
& \lesssim |H|_1 \quad \text{ in } \Mcal^{\Hcal}_{[s_0,s_1]},
\\
\big|g^{\alpha\beta}\pi[\Omega_{ab}]_{\alpha\beta}\big|
& \lesssim |H|_1 \quad \text{ in } \Mcal_{[s_0,s_1]}.
\endaligned 
\end{equation}
\end{lemma}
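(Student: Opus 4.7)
\medskip
\noindent\textbf{Proof plan.} The plan is to carry out a direct algebraic computation starting from the explicit expressions \eqref{eq7-30-june-2025} of $\pi[L_a]_{\alpha\beta}$ and $\pi[\Omega_{ab}]_{\alpha\beta}$ in the natural frame. The only conceptual input is that the uniform spacelike assumption \eqref{eq-USA-condition} with $\eps_s$ sufficiently small gives $|H| \leq \eps_s$, hence by Lemma~\ref{lem1-16-june-2025} the inverse metric satisfies the componentwise bound $|g^{\alpha\beta}| \lesssim 1$ throughout $\Mcal_{[s_0,s_1]}$. With this, every factor multiplying an $H$-entry or a vector-field derivative of $H$ is of size $O(1)$, so the trace can be estimated term by term.

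For the rotation case, I would expand
\begin{equation*}
g^{\alpha\beta}\pi[\Omega_{ab}]_{\alpha\beta}
= g^{\alpha\beta}\Omega_{ab}H_{\alpha\beta}
+ 2g^{a\beta}H_{b\beta} - 2g^{b\beta}H_{a\beta},
\end{equation*}
using the symmetry of $g^{\alpha\beta}$ to combine the four pointwise algebraic terms in \eqref{eq7-30-june-2025} into the two indicated contractions. The first term is bounded by $|H|_1$ since $|\Omega_{ab}H_{\alpha\beta}| \leq |H|_1$ and $|g^{\alpha\beta}| \lesssim 1$, while the two remaining contractions are bounded by $|H| \leq |H|_1$ by the same input. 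This yields the estimate throughout $\Mcal_{[s_0,s_1]}$.

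For the boost case, the analogous expansion gives
\begin{equation*}
g^{\alpha\beta}\pi[L_a]_{\alpha\beta}
= g^{\alpha\beta}L_a H_{\alpha\beta}
+ 2g^{0\beta}H_{a\beta} + 2g^{a\beta}H_{0\beta},
\end{equation*}
and the same argument applies: the derivative term is controlled by $|H|_1$ and the two algebraic terms by $|H|$. The restriction to $\Mcal^{\Hcal}_{[s_0,s_1]}$ here is only because $L_a$ belongs to the admissible family $\mathscr{Z}$ used for the hyperboloidal high-order norms; no additional geometric smallness is required beyond what \eqref{eq-USA-condition} already provides.

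There is essentially no analytical obstacle — the statement is an algebraic lemma whose only subtlety is recognizing that after the $H$-derivative and algebraic terms are separated, each contraction is controlled by $|H|_1$ (absorbing $|H| \lesssim |H|_1$). The bookkeeping of the symmetrizations in the algebraic terms is the only point that needs care, but this is purely combinatorial.
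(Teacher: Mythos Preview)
Your proposal is correct and takes exactly the approach the paper intends: the paper itself does not give a proof but simply states that ``the above expressions directly lead to the following estimates,'' referring to \eqref{eq7-30-june-2025}, and your expansion with the contraction bookkeeping is precisely that direct derivation. Your remark on the restriction to $\Mcal^{\Hcal}_{[s_0,s_1]}$ for the boost case is a reasonable observation, though note that since $|H|_1$ already contains $|L_aH_{\alpha\beta}|$ by definition (see the display after Lemma~\ref{lem1-08-aout-2025}), the restriction is more a matter of where the estimate is subsequently used than a genuine analytical limitation.
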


%------------------------------------------------------------------------

\paragraph{Analysis on the Lie derivatives of the unit normal vector.}

\begin{lemma}\label{lem2-08-aout-2025}
Assume that the condition \eqref{eq-USA-condition} holds for a sufficiently small $\eps_s$. Then, in the hyperboloidal domain in $\Mcal^{\Hcal}_{[s_0,s_1]}$, one has
\bse
\begin{equation}\label{eq5-07-aout-2025}
\aligned
{} [L_a,\vec{n}]^{\top} & =   - \lapsb^{-1}\gb^{0c}L_a(\lapsb^2) \delb_c + R[L_a], 
\endaligned
\end{equation}
in which the remainder $R[L_a]$ satisfies 
\begin{equation}
{} |R[L_a]|_{\vec{n}}\lesssim \zetab^{-1}|H|_1, 
\end{equation}
\ese
\bse
together with 
\begin{equation}\label{eq6-08-aout-2025}
\aligned
{} [\Omega_{ab},\vec{n}]^{\top} 
= - \lapsb^{-1}\Omega_{ab}(\lapsb^2)\gb^{0c}\delb_c + R[\Omega_{ab}], 
\endaligned
\end{equation}
in which the remainder $R[\Omega_{ab}]$ satisfies 
\begin{equation}
{} |R[\Omega_{ab}]|_{\vec{n}}\lesssim \zetab^{-1}|H|_1.
\end{equation}
\ese
\end{lemma}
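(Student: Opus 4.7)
The plan is to insert the decomposition $\vec{n} = \lapsb^{-1}(\delb_0 - \betab^b\delb_b)$ into the commutator and compute
$$
[X,\vec{n}] = X(\lapsb^{-1})(\delb_0 - \betab^b\delb_b) + \lapsb^{-1}\bigl([X,\delb_0] - X(\betab^b)\delb_b - \betab^b[X,\delb_b]\bigr).
$$
The first summand on the right equals $X(\lapsb^{-1})\,\lapsb\,\vec{n}$ and is therefore purely normal, so it disappears under tangential projection. In the second summand I substitute the identity $\betab^b = \lapsb^2\gb^{0b}$ from \eqref{eq1-13-aout-2025} and expand $X(\betab^b) = X(\lapsb^2)\gb^{0b} + \lapsb^2 X(\gb^{0b})$. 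This immediately singles out the principal term $-\lapsb^{-1}X(\lapsb^2)\gb^{0c}\delb_c$ stated in \eqref{eq5-07-aout-2025}--\eqref{eq6-08-aout-2025} and leaves a remainder
$$
R[X] = \lapsb^{-1}[X,\delb_0]^{\top} - \lapsb\,X(\gb^{0c})\delb_c - \lapsb^{-1}\betab^c[X,\delb_c]^{\top}.
$$

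Next I compute the elementary commutators. For $X=L_a$ restricted to $\Mcal^{\Hcal}_{[s_0,s_1]}$, using $J=s/t$, $L_a(s/t)=-sx^a/t^2$, $[L_a,\del_t]=-\del_a$ and $[L_a,\del_b]=-\delta_{ab}\del_t$, a direct calculation yields $[L_a,\delb_0]=-J\delb_a$ and $[L_a,\delb_b]=-(x^b/t)\delb_a$, both tangential to $\Mcal_s$. For $X=\Omega_{ab}$ on $\Mcal_{[s_0,s_1]}$, since $J=\delb_sT$ and $\delb_rT$ depend only on the $\Omega_{ab}$-invariant pair $(s,r)$ while $\Omega_{ab}(x^c/r)=(x^a\delta_{bc}-x^b\delta_{ac})/r$, one finds $[\Omega_{ab},\delb_0]=0$ and $[\Omega_{ab},\delb_c]=\delta_{bc}\delb_a-\delta_{ac}\delb_b$, again tangential. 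Inserting these expressions into the formula for $R[X]$ produces a linear combination of the $\delb_c$'s whose scalar coefficients depend on $\lapsb$, $\gb^{0c}$ and their $X$-derivatives.

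The final step is the estimate $|R[X]|_{\vec{n}}\lesssim\zetab^{-1}|H|_1$. The decisive observation is the flat cancellation: substituting the Minkowskian values $\lapsb_\eta=J/\zeta$, $\gb_\eta^{0c}=-J^{-1}\delb_rT\,x^c/r$ and the corresponding $X(\gb_\eta^{0c})$, the three contributions to $R[X]$ telescope identically to zero, as they must since $L_a$ and $\Omega_{ab}$ are Killing for $\eta$. Consequently every surviving term in $R[X]$ is a curved-minus-flat difference, expressible as a homogeneous bounded factor times either $\lapsb\,X(\gb^{0c}-\gb_\eta^{0c})$ or $(\lapsb-\lapsb_\eta)$ times a bounded quantity. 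Using the explicit expression \eqref{eq11-03-aout-2025}, the derivative bound $|Xg^{\alpha\beta}|\lesssim|H|_1$ from \eqref{eq4-26-july-2025}, the lapse estimate $\lapsb\lesssim(\zeta/\zetab)\lapsb_\eta$ of Claim~\ref{cor1-16-june-2025}, the lapse-derivative estimates of Lemma~\ref{lem1-26-july-2025}, and the pointwise bound $|\delb_c|_{\vec{n}}\lesssim 1$ of Lemma~\ref{lem1-23-july-2025}, each coefficient is controlled by $\zetab^{-1}|H|_1$, yielding the claim. The principal obstacle will be the bookkeeping of the merging-region corrections for $\Omega_{ab}$ where the $\xi$-cutoff enters through $J$ and $\delb_rT$; however the angular symmetry of $\xi$ ensures that only $H$-derivatives genuinely contribute, so the bound goes through.
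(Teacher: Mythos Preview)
Your overall structure is sound and the flat cancellation $R_\eta[X]=0$ is correct, but the closing estimate for $X=L_a$ has a real gap. When you write $R[L_a]=R[L_a]-R_\eta[L_a]$ and regroup, you obtain three ``lapse-difference'' contributions
\[
-(\lapsb^{-1}-1)J\delb_a,\qquad -(\lapsb-1)L_a(\gb_\eta^{0c})\delb_c,\qquad (\lapsb-1)\gb_\eta^{0c}(x^c/t)\delb_a,
\]
together with the $P^c$-terms you handle correctly. The problem is that $L_a(\gb_\eta^{0c})=-t\delta_{ac}/s$ and $\gb_\eta^{0c}=-x^c/s$ are of size $\zeta^{-1}$, not bounded. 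Since near the light cone one can have $\zeta^{-2}|H^{\Ncal00}|\gg1$ (only $|H|\le\eps_s\zeta$ is assumed), neither $|\lapsb-1|\cdot\zeta^{-1}$ nor $|\lapsb^{-1}-1|\cdot\zeta^{-1}$ is controlled by $\zetab^{-1}|H|_1$ on its own. Your sentence ``$(\lapsb-\lapsb_\eta)$ times a bounded quantity'' is precisely where this fails.

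The fix is a second algebraic cancellation: using $J=s/t$, $L_a(\gb_\eta^{0c})\delb_c=-(t/s)\delb_a$, $\gb_\eta^{0c}(x^c/t)=-r^2/(st)$, and $t^2-r^2=s^2$, the three lapse-difference terms sum exactly to
\[
(s/t)(\lapsb-\lapsb^{-1})\delb_a=\lapsb(s/t)(1-\lapsb^{-2})\delb_a=\lapsb(s/t)\Hb^{00}\delb_a,
\]
and this residual \emph{is} bounded by $\zetab^{-1}|H|$ via $\lapsb\lesssim\zeta/\zetab$ and $|\Hb^{00}|\lesssim\zeta^{-2}|H|$. For $\Omega_{ab}$ no such issue arises because $[\Omega_{ab},\delb_0]=0$, so all remainder pieces carry the same factor $\lapsb$ and the flat parts cancel identically in $\lapsb$.

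The paper's route is more economical here: it uses $L_a=t\delb_a$ and the coordinate commutativity $[\delb_a,\delb_\alpha]=0$ to obtain directly $[L_a,\vec{n}]^{\top}=-\lapsb^{-1}L_a(\betab^c)\delb_c-\vec{n}(t)\delb_a$, a two-term remainder in which the combination $(s/t)(\lapsb-\lapsb^{-1})$ appears explicitly after expanding $\vec{n}(t)$, without any subtract-the-flat-case detour.
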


\begin{proof}
\bse
First of all, observe that, thanks to \eqref{eq5-12-june-2025} and \eqref{eq1-13-aout-2025}
\begin{equation}\label{eq4-07-aout-2025}
\aligned
&[L_a,\vec{n}]^{\top} = [t\delb_a,\vec{n}]^{\top} 
\\
& =  - \lapsb^{-1}L_a(\betab^c)\delb_c - \vec{n}(t)\delb_a
=- \lapsb^{-1}L_a(\lapsb^2\gb^{0c})\delb_c 
- \lapsb^{-1}\big(J - (x^c/r)\betab^c\delb_rT\big)\delb_a
\\
& = - \lapsb L_a\gb^{0c}\delb_c - \lapsb^{-1}\gb^{0c}L_a(\lapsb^2)\delb_c 
- \lapsb^{-1}\big(J - (x^c/r)\betab^c\delb_rT\big)\delb_a.
\endaligned
\end{equation}
For the first term in the right-hand side, we remark that in $\Mcal^{\Hcal}_{[s_0,s_1]}$, $\zeta = J = (s/t)$ and $\delb_rT = (r/t)$. Thus we find, thanks to \eqref{eq11-03-aout-2025}
\begin{equation}
\aligned
- \lapsb L_a\gb^{0c}\delb_c & =  - \lapsb L_a\big(-J^{-1}(x^c/r)\delb_rT\big)\delb_c
- \lapsb L_a\big(J^{-1}H^{0c} - J^{-1}(x^b/r)\delb_rT H^{cb}\big)\delb_c
\\
& = \lapsb s^{-1}L_a - \lapsb L_a(t/s) \big(H^{0c} - (x^b/t) H^{bc}\big)\delb_c
- \lapsb (t/s)L_a\big(H^{0c} - (x^b/t) H^{bc}\big)\delb_c.
\endaligned
\end{equation}
For the last term in the right-hand side of \eqref{eq4-07-aout-2025}, thanks to \eqref{eq1-13-aout-2025} and \eqref{eq11-03-aout-2025} we have 
\begin{equation}
\aligned
&  - \lapsb^{-1}\big(J - (x^c/r)\betab^c\delb_rT\big)\delb_a 
\\
& = - \lapsb^{-1}\big(J-(x^c/r)\lapsb^2\gb^{0b}\delb_rT\big)\delb_a
\\
& =  - \lapsb^{-1}\big(J + J^{-1}\lapsb^2(\delb_rT)^2\big)\delb_a 
- \lapsb (t/s)\big((x^d/r)\delb_rT H^{dc} - H^{0c}\big)(x^c/t)\delb_a
\\
& = - \lapsb s^{-1}L_a + (\lapsb - \lapsb^{-1})(s/t)\delb_a
- \lapsb (t/s)\big((x^d/t) H^{dc} - H^{0c}\big)(x^c/t)\delb_a.
\endaligned
\end{equation}
These lead us to
\begin{equation}
\aligned
{} [L_a,\vec{n}]^{\top} 
& =   - \lapsb^{-1}\gb^{0c}L_a(\lapsb^2)\delb_c
+(\lapsb - \lapsb^{-1})(s/t)\delb_a \\
& \quad - \lapsb L_a(t/s) \big(H^{0c} - (x^b/t) H^{bc}\big)\delb_c
- \lapsb (t/s)L_a\big(H^{0c} - (x^b/t) H^{bc}\big)
\\
& \quad - \lapsb (t/s)\big((x^d/t) H^{dc} - H^{0c}\big)(x^c/t)\delb_a.
\endaligned
\end{equation}
Recalling that, thanks to \eqref{eq6-03-aout-2025}, $(\lapsb- \lapsb^{-1}) = \lapsb(1- \lapsb^{-2}) = \lapsb (1+\gb^{00})$ and observing that $\lapsb_{\eta} = J\zeta^{-1} = 1$ in $\Mcal^{\Hcal}_{[s_0,s_1]}$, we obtain
\begin{equation}\label{eq1-18-oct-2025(l)}
\aligned
{} [L_a,\vec{n}]^{\top} & = 
- \lapsb^{-1}\gb^{0c}L_a(\lapsb^2) \delb_c
+\lapsb\Hb^{00}(s/t)\delb_a
- \lapsb L_a(t/s) \big(H^{0c} - (x^b/t) H^{bc}\big)\delb_c
\\
& \quad - \lapsb (t/s)L_a\big(H^{0c} - (x^b/t) H^{bc}\big)
- \lapsb (t/s)\big((x^d/t) H^{dc} - H^{0c}\big)(x^c/t)\delb_a.
\endaligned
\end{equation}
We remark that, thanks to \eqref{eq4-23-july-2025}, 
$$
\big|\Hb^{00}\big| = \big|(s/t)^{-2}H^{\Ncal00}\big| + \big|R[H]\big|\lesssim \zetab^{-2}|H| + |H|^2.
$$
Thus, all terms in the right-hand side of \eqref{eq1-18-oct-2025(l)}, except  the first one, are bounded by $\zetab^{-1}|H|_1$,
where we used $\big|L_a(s/t)\big|\lesssim (s/t)$, $\big|L_a(x^b/r)\big|\lesssim 1$. We have established \eqref{eq5-07-aout-2025}.
\ese
\bse
For $\Omega_{ab}$, a similar calculation provides us with the identity 
\be
{} [\Omega_{ab},\vec{n}]^{\top} 
=- \lapsb^{-1}\Omega_{ab}(\betab^c)\delb_c - \vec{n}(x^a)\delb_b + \vec{n}(x^b)\delb_a.
\ee
Here, $\zeta,T,J$ are purely radial functions which therefore commute with the spatial rotations $\Omega_{ab}$ and, therefore, 
\be
\aligned
& \quad - \vec{n}(x^a)\delb_b + \vec{n}(x^b)\delb_a 
= \lapsb^{-1}\big(\betab^a\delb_b - \betab^b\delb_a\big)
= \lapsb\gb^{0a}\delb_b - \lapsb\gb^{0b}\delb_a
\\
 & = - \lapsb r^{-1}J^{-1}\delb_rT\Omega_{ab} 
+ \lapsb J^{-1}\big(H^{0a}-(x^d/r)\delb_rT H^{da}\big)\delb_b 
- \lapsb J^{-1}\big(H^{0b} - (x^d/r)\delb_rT H^{db}\big)\delb_a.
\endaligned
\ee
We also have 
\be
\aligned
- \lapsb^{-1}\Omega_{ab}(\betab^c)\delb_c
& = - \lapsb^{-1}\Omega_{ab}(\lapsb^2\gb^{0c})\delb_c
=- \lapsb^{-1}\Omega_{ab}(\lapsb^2)\gb^{0c}\delb_c - \lapsb\Omega_{ab}(\gb^{0c})\delb_c
\\
& = - \lapsb^{-1}\Omega_{ab}(\lapsb^2)\gb^{0c}\delb_c
+ \lapsb r^{-1}J^{-1}\delb_rT\Omega_{ab} 
+ \lapsb J^{-1}\Omega_{ab}\big((x^d/r)\delb_rTH^{dc} - H^{0c}\big)\delb_c, 
\endaligned
\ee
and 
\be
\aligned
{} [\Omega_{ab},\vec{n}]^{\top} 
& = - \lapsb^{-1}\Omega_{ab}(\lapsb^2)\gb^{0c}\delb_c
+ \lapsb J^{-1}\Omega_{ab}\big( H^{0c} - (x^d/r)\delb_rTH^{dc}\big)\delb_c
\\
& \quad - \lapsb J^{-1}\big(H^{0a}-(x^d/r)\delb_rT H^{da}\big)\delb_b 
- \lapsb J^{-1}\big(H^{0b} - (x^d/r)\delb_rT H^{db}\big)\delb_a.
\endaligned
\ee
The terms other than the first in the right-hand side are bounded by, 
$
\lapsb J^{-1}|H|_1\lesssim \zetab^{-1}|H|_1.
$
This concludes the derivation of \eqref{eq6-08-aout-2025}.
\ese
\end{proof}

%--------------------------------------------------------------------- 

\paragraph{Conclusion.}

\begin{proposition}\label{lem3-08-aout-2025}
Assume that \eqref{eq-USA-condition} holds with a sufficiently small $\eps_s$. 
\\
$\bullet$ In the hyperboloidal domain $\Mcal^{\Hcal}_{[s_0,s_1]}$, one has 
\begin{equation}\label{eq5-08-aout-2025}
\big|\vec{w}[L_a]\big|_{\vec{n}}\lesssim \zetab^{-1}|H|_1 + \frac{(s/t)^{-2}|H^{\Ncal00}|_1}{1+(s/t)^{-2}|H^{\Ncal00}|} \quad \text{ in } \Mcal^{\Hcal}_{[s_0,s_1]}.
\end{equation}
$\bullet$ In the merging-Euclidean domain $\Mcal^{\ME}_{[s_0,s_1]}$, one has 
\begin{equation}\label{eq9-11-aout-2025}
\big|\vec{w}[\Omega_{ab}]\big|_{\vec{n}}\lesssim \zetab^{-1}|H|_1 + \frac{\zeta^{-2}|H^{\Ncal00}|_1}{1+\zeta^{-2}|H^{\Ncal00}|} \quad \text{ in } \Mcal^{\ME}_{[s_0,s_1]}.
\end{equation}
\end{proposition}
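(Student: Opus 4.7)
The plan is to expand $\vec{w}[X]$ via Proposition~\ref{prop1-08-june-2025} (with $X = L_a$ in the hyperboloidal domain and $X = \Omega_{ab}$ in the merging-Euclidean domain), and then to combine Lemmas~\ref{lem2-08-aout-2025}, \ref{lem4-08-aout-2025} and~\ref{lem1-08-aout-2025} with a sharp bound on the tangential vector $\gb^{0c}\delb_c$. The deformation-tensor term $-\tfrac{1}{4}g^{\alpha\beta}\pi[X]_{\alpha\beta}\,\vec{n}$ is immediate: Lemma~\ref{lem4-08-aout-2025} yields $|g^{\alpha\beta}\pi[X]_{\alpha\beta}| \lesssim |H|_1$, and since $|\vec{n}|_{\vec{n}} = 1$ and $\zetab \lesssim 1$ in our regime, this contribution is bounded by $\zetab^{-1}|H|_1$.

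For the commutator term $\tfrac{1}{2}[X,\vec{n}]^{\top}$, Lemma~\ref{lem2-08-aout-2025} splits off a remainder already bounded by $\zetab^{-1}|H|_1$, leaving only the principal piece $-\lapsb^{-1}\gb^{0c}X(\lapsb^2)\,\delb_c = -2X(\lapsb)\,\gb^{0c}\delb_c$. I would control this principal piece with two complementary inputs. First, Lemma~\ref{lem1-08-aout-2025} supplies
\begin{equation*}
|X\lapsb| \,\lesssim\, \lapsb\left(\frac{\zeta^{-2}|H^{\Ncal00}|_1}{1+\zeta^{-2}|H^{\Ncal00}|} + |H|_1\right),
\end{equation*}
and Claim~\ref{cor1-16-june-2025} gives the equivalence $\lapsb \simeq \zeta/\zetab$ in the hyperboloidal region. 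Second, the vector $\gb^{0c}\delb_c$ has a distinguished near-radial structure: its flat value, computed from \eqref{eq11-03-aout-2025}, is $-(r/s)\,\delb_r$, so by Lemma~\ref{lem1-14-june-2025} one has $|\delb_r|_{\sigmab} \simeq \zetab$ and hence $|\gb^{0c}\delb_c|_{\sigmab} \lesssim (r/s)\,\zetab + O(|H|)$.

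Combining these two inputs, and using $r/t \leq 1$ together with the identity $\zetab^{-2}|H^{\Ncal00}|_1 = \frac{\zeta^{-2}|H^{\Ncal00}|_1}{1+\zeta^{-2}|H^{\Ncal00}|}$, one obtains
\begin{equation*}
|X(\lapsb)|\cdot|\gb^{0c}\delb_c|_{\sigmab}
\,\lesssim\, (r/t)\left(\zetab^{-2}|H^{\Ncal00}|_1 + |H|_1\right)
\,\lesssim\, \frac{(s/t)^{-2}|H^{\Ncal00}|_1}{1+(s/t)^{-2}|H^{\Ncal00}|} + \zetab^{-1}|H|_1,
\end{equation*}
which together with the remainder bounds matches \eqref{eq5-08-aout-2025}. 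The argument for $\Omega_{ab}$ in the merging-Euclidean region is structurally parallel, but benefits from simplifications due to the fact that $\zeta$, $J$, and $\delb_rT$ are purely radial and thus annihilated by $\Omega_{ab}$; one then invokes the merging-Euclidean parts of Lemmas~\ref{lem2-08-aout-2025} and~\ref{lem1-08-aout-2025} and proceeds in the same way, producing \eqref{eq9-11-aout-2025}.

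The main obstacle is the sharp control of $|\gb^{0c}\delb_c|_{\sigmab}$: the naive estimate $|\delb_c|_{\sigmab} \lesssim 1$ would yield only $\zetab^{-1}\cdot\frac{\zeta^{-2}|H^{\Ncal00}|_1}{1+\zeta^{-2}|H^{\Ncal00}|}$, which is strictly weaker than the target by a factor $\zetab^{-1}$. Extracting the radial leading behaviour of $\gb^{0c}\delb_c$ and invoking Lemma~\ref{lem1-14-june-2025} (a consequence of the light-bending condition~\eqref{eq-USA-condition}) is precisely what saves the missing factor $\zetab$ and closes the estimate.
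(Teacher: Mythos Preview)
Your proposal is correct and follows essentially the same route as the paper: both arguments split $\gb^{0c}\delb_c$ into its radial piece $-J^{-1}\delb_rT\,\delb_r$ and an $H$-correction, invoke $|\delb_r|_{\vec{n}}\lesssim\zetab$ from Lemma~\ref{lem1-14-june-2025} to recover the crucial $\zetab$ factor, and combine this with Lemmas~\ref{lem1-08-aout-2025}, \ref{lem2-08-aout-2025}, and~\ref{lem4-08-aout-2025}. One minor imprecision: the correction to $\gb^{0c}\delb_c$ is $O(J^{-1}|H|)$, not $O(|H|)$ (the $J^{-1}$ comes from \eqref{eq11-03-aout-2025}); however, this extra factor cancels against $\lapsb\simeq J\zetab^{-1}$ from Claim~\ref{cor1-16-june-2025} in the product $|X(\lapsb)|\cdot|\gb^{0c}\delb_c|_{\sigmab}$, leaving $\zetab^{-1}|H|\cdot(\ldots)\lesssim\eps_s(\ldots)$ by \eqref{eq2-14-june-2025}, so the final bound is unaffected. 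Note also that Lemma~\ref{lem2-08-aout-2025} is literally stated only in $\Mcal^{\Hcal}$, but its $\Omega_{ab}$ computation carries over verbatim to $\Mcal^{\ME}$ since it relies only on the radiality of $\zeta,J,\delb_rT$---exactly the simplification you identify.
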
 

\begin{proof}
\bse
Recalling Lemma~\ref{lem1-08-aout-2025} and \eqref{eq11-03-aout-2025}, for any vector field $X = L_a,\Omega_{ab}$ we find 
\be
\aligned
- \lapsb^{-1}\gb^{0c}X(\lapsb^2)\delb_c & =  -2\gb^{0c}X(\lapsb)\delb_c
= 2J^{-1}\delb_rTX(\lapsb)\,\delb_r + 2J^{-1}X(\lapsb)\big((x^d/r)\delb_rTH^{dc} - H^{0c}\big)\delb_c
\\
=:& T_1 + T_2.
\endaligned
\ee
We have first 
\begin{equation}
|T_2|_{\vec{n}}\lesssim J^{-1}\lapsb |H|\Big(\frac{\zeta^{-2}|H^{\Ncal00}|_1}{1+\zeta^{-2}|H^{\Ncal0}|} + |H|_1\Big)
\lesssim (\zeta/\zetab)\frac{\zeta^{-2}|H^{\Ncal00}|_1}{1+\zeta^{-2}|H^{\Ncal00}|} + (\zeta/\zetab)|H|_1.
\end{equation}
For $T_1$, we rely on Lemma~\ref{lem1-14-june-2025} and write 
\be
|\delb_r|_{\vec{n}}^2 = g(\delb_r,\delb_r) = \zeta^2-H^{\Ncal00} + T_{11}, 
\ee
in which $|T_{11}|\lesssim \zeta^2|H| + |H|^2$. 
Therefore, we obtain 
\begin{equation}\label{eq7-08-aout-2025}
|\delb_r|_{\vec{n}}\lesssim \zetab, 
\end{equation}
and so 
\begin{equation}
\big|- \lapsb^{-1}\gb^{0c}X(\lapsb^2)\delb_c\big|_{\vec{n}}\lesssim \frac{\zeta^{-2}|H^{\Ncal00}|_1}{1+\zeta^{-2}|H^{\Ncal0}|} + |H|_1.
\end{equation}
By recalling Lemmas~\ref{lem2-08-aout-2025} and~\ref{lem4-08-aout-2025}, the desired estimates are established. 
\ese
\end{proof}

}

%--------------------------------------------------------------------------------------------------------------

\subsection{ Derivation of the Sobolev inequality in the hyperboloidal region}
\label{section===94}

{ 

We proceed by assuming the uniform spacelike condition \eqref{eq-USA-condition} for some sufficiently small $\eps_s$ (unless specified otherwise). It is convenient to introduce the short-hand notation 
\be
A[\Psi] := \la\Psi,\vec{n}\cdot\Psi\ra_{\ourD}.
\ee
We assume that the metric satisfies 
\begin{equation}\label{eq1-09-aout-2025}
\zetab^{-1}|H|_1 + \frac{(s/t)^{-2}|H^{\Ncal00}|_1}{1+(s/t)^{-2}|H^{\Ncal00}|}\lesssim (s/t)^{- \delta}
\quad \text{ in } \Mcal^{\Hcal}_{[s_0,s_1]} \quad \text{ for some } 0\leq \delta\leq 1. 
\end{equation}
Then, thanks to \eqref{eq10-30-june-2025} together with Proposition~\ref{lem3-08-aout-2025}, we obtain
\begin{equation}\label{eq3-03-july-2025}
\big| L_a(A[\Psi])\big|\lesssim (s/t)^{- \delta}|\Psi|_{\vec{n}}^2
+|\Psi|_{\vec{n}}\sum_{|I|\leq 1}\big|\widehat{L}^I\Psi\big|_{\vec{n}}.
\end{equation}

Moreover, given any sufficiently regular spinor field $\Psi$, we introduce the functions 
\bel{equa-21juillet2025-d}
u_k = u_k[\Psi] := \sqrt{(s/t)^{1+4\delta}A[\Psi] + k^{-2}},
\qquad
v_k = v_k[\Psi] := \sqrt{(s/t)^{1+2\delta}A[\Psi] + k^{-2}}, 
\ee
which clearly are also regular functions.  It is clear that 
\begin{equation}\label{eq1-10-mai-2025}
2u_k\geq \max\bigl( (s/t)^{1/2+2\delta}|\Psi|_{\vec{n}}, k^{-1}\bigr),
\qquad
2v_k\geq \max\bigl( (s/t)^{1/2+\delta}|\Psi|_{\vec{n}}, k^{-1}\bigr).
\end{equation} 

\begin{lemma}
\label{lem1-10-mai-2025}
Let $\Psi$ be a sufficiently regular spinor field defined in $\Mcal^{\Hcal}_{[s_0,s_1]}$. Suppose that in $\Mcal^{\Hcal}_{[s_0,s_1]}$ 
\begin{equation}\label{eq4-10-mai-2025}
\big|L_a(A[\Psi])\big|
\lesssim |\Psi|_{\vec{n}}\sum_{|I|\leq 1}|\widehat{L}^I\Psi|_{\vec{n}} 
+ (s/t)^{- \delta}|\Psi|_{\vec{n}}^2, \qquad \delta\geq 0.
\end{equation}
Then the functions defined by \eqref{equa-21juillet2025-d} satisfy 
\begin{equation}\label{eq5-15-mai-2025}
\aligned
|L_au_k| & \lesssim  (s/t)^{1/2+\delta}|\Psi|_{\vec{n}} 
+ \sum_{|I|= 1}(s/t)^{1/2+2\delta}|\widehat{L}^I\Psi|_{\vec{n}},
\\
|L_av_k| & \lesssim  (s/t)^{1/2}|\Psi|_{\vec{n}} 
+ \sum_{|I|= 1}(s/t)^{1/2+\delta}|\widehat{L}^I\Psi|_{\vec{n}}.
\endaligned
\end{equation}
\end{lemma}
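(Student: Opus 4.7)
The plan is a direct chain-rule computation, using crucially that the boosts $L_a$ annihilate the hyperboloidal time $s$ in $\Mcal^{\Hcal}_{[s_0,s_1]}$. Since the regularized functions $u_k$ and $v_k$ are strictly positive, I can safely differentiate their squares and divide. Write
$$
L_a u_k = \frac{L_a(u_k^2)}{2u_k} = \frac{L_a\bigl((s/t)^{1+4\delta}\bigr)\,A[\Psi] + (s/t)^{1+4\delta}\, L_a A[\Psi]}{2u_k},
$$
with $A[\Psi]=\la\Psi,\vec n\cdot\Psi\ra_{\ourD}=|\Psi|_{\vec n}^2$, and likewise for $v_k$ with $1+4\delta$ replaced by $1+2\delta$.

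The first ingredient is the geometric identity $L_a s\equiv 0$ on $\Mcal^{\Hcal}_{[s_0,s_1]}$, which follows by direct computation from $s=\sqrt{t^2-r^2}$, $L_a t=x^a$, and $L_a r= tx^a/r$. Consequently $L_a(s/t)=-s x^a/t^2$, so $|L_a(s/t)|\leq (s/t)$, and therefore
$$
\bigl|L_a\bigl((s/t)^{1+\alpha}\bigr)\bigr|\leq (1+\alpha)\,(s/t)^{1+\alpha},\qquad \alpha\geq 0.
$$
The second ingredient is the hypothesis \eqref{eq4-10-mai-2025}, namely
$$
\bigl|L_a A[\Psi]\bigr|\lesssim |\Psi|_{\vec n}\sum_{|I|=1}|\widehat L^I\Psi|_{\vec n}+\bigl(1+(s/t)^{-\delta}\bigr)|\Psi|_{\vec n}^2,
$$
where the $|I|=0$ term has been absorbed into the second summand.

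Substituting these two ingredients into the expression for $L_au_k$ and using $A[\Psi]=|\Psi|_{\vec n}^2$ together with $(s/t)\leq 1$ (which absorbs the $(s/t)^{1+4\delta}|\Psi|^2_{\vec n}$ term into the $(s/t)^{1+3\delta}|\Psi|^2_{\vec n}$ term), the numerator is controlled by
$$
(s/t)^{1+4\delta}|\Psi|_{\vec n}\sum_{|I|=1}|\widehat L^I\Psi|_{\vec n}+(s/t)^{1+3\delta}|\Psi|_{\vec n}^2.
$$
The denominator satisfies $2u_k\geq (s/t)^{1/2+2\delta}|\Psi|_{\vec n}$ by \eqref{eq1-10-mai-2025}, so dividing term by term gives exactly
$$
|L_au_k|\lesssim (s/t)^{1/2+\delta}|\Psi|_{\vec n}+(s/t)^{1/2+2\delta}\sum_{|I|=1}|\widehat L^I\Psi|_{\vec n},
$$
which is the first bound in \eqref{eq5-15-mai-2025}. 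The computation for $v_k$ is identical except for the exponent $1+2\delta$ in place of $1+4\delta$, and using $2v_k\geq (s/t)^{1/2+\delta}|\Psi|_{\vec n}$; the resulting numerator is bounded by
$(s/t)^{1+2\delta}|\Psi|_{\vec n}\sum_{|I|=1}|\widehat L^I\Psi|_{\vec n}+(s/t)^{1+\delta}|\Psi|_{\vec n}^2$, which after division yields the second bound.

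There is no genuine obstacle here: the argument is a mechanical application of the chain rule, and the only point requiring care is the identity $L_as=0$ on $\Mcal^{\Hcal}_{[s_0,s_1]}$, which justifies the simple upper bound $|L_a((s/t)^{1+\alpha})|\lesssim (s/t)^{1+\alpha}$ rather than the weaker $(s/t)^{\alpha}$ one would obtain in general. The role of the $k^{-2}$ regularization is merely to make $u_k,v_k$ smooth and everywhere positive so that division by them is legitimate; the final bounds are independent of $k$.
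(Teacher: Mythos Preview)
Your proof is correct and follows essentially the same approach as the paper's: differentiate $u_k^2$, use $|L_a(s/t)|\le (s/t)$ together with the hypothesis \eqref{eq4-10-mai-2025} to bound $|L_a(u_k^2)|$, and then divide by $2u_k\ge (s/t)^{1/2+2\delta}|\Psi|_{\vec n}$ via \eqref{eq1-10-mai-2025}. You are simply more explicit than the paper about the identity $L_as=0$ and the resulting bound $|L_a((s/t)^{1+\alpha})|\lesssim (s/t)^{1+\alpha}$, which the paper uses tacitly.
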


\begin{proof} 
\bse
Thanks to \eqref{eq4-10-mai-2025}, we have 
\begin{equation}
\big|L_a\big(u_k^2\big)\big|
\lesssim 
(s/t)^{1+4\delta}|\Psi|_{\vec{n}}
\sum_{|I|\leq 1}|\widehat{L}^I\Psi|_{\vec{n}} 
+ (s/t)^{1+3\delta}|\Psi|_{\vec{n}}^2,
\end{equation}
which provides us with 
\be
|u_k||L_au_k|\lesssim(s/t)^{1+4\delta}|\Psi|_{\vec{n}}\sum_{|I|\leq 1}|\widehat{L}^I\Psi|_{\vec{n}} + (s/t)^{1+3\delta}|\Psi|_{\vec{n}}^2.
\ee
In combination with \eqref{eq1-10-mai-2025}, this leads us to the desired result. The derivation of the estimate on $v_k$ is similar. 
\ese
\end{proof}
%------------------------------------------

\begin{lemma}\label{prop1-16-mai-2025}
Let $\Psi$ be a sufficiently regular spinor field defined in $\Mcal^{\Hcal}_{[s_0,s_1]}$. Suppose that in $\Mcal^{\Hcal}_{[s_0,s_1]}$ 
\begin{equation}\label{eq9-11-mai-2025}
\big|L_a(A[\Psi])\big|
\lesssim |\Psi|_{\vec{n}}\sum_{|I|\leq 1}|\widehat{L}^I\Psi|_{\vec{n}} 
+ (s/t)^{- \delta}|\Psi|_{\vec{n}}^2, \qquad \delta\geq 0.
\end{equation}
Then one has the pointwise bound 
\begin{equation}\label{eq10-11-mai-2025}
(s/t)^{1/2+2\delta}t^{3/2}|\Psi|_{\vec{n}}\lesssim 
\sum_{|I|\leq 2}\|(s/t)^{1/2+|I|\delta}|L^I\Psi|_{\vec{n}}\|_{L^2(\Mcal_s)}.
\end{equation}
\end{lemma}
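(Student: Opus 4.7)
The plan is to reduce \eqref{eq10-11-mai-2025} to the classical scalar Klainerman--Sobolev inequality on hyperboloidal slices applied to the regularized scalar $u_k = \sqrt{(s/t)^{1+4\delta} A[\Psi] + k^{-2}}$ introduced in \eqref{equa-21juillet2025-d}. Since each slice $\Mcal_s^{\Hcal}$ coincides with the genuine hyperboloid $\{t^2 - r^2 = s^2\}$, the standard scalar bound
\[
t^{3/2}\, u_k(t,x) \lesssim \sum_{|I|\leq 2} \|L^I u_k\|_{L^2(\Mcal_s^{\Hcal})}
\]
is available directly. The lower bound $2u_k \geq (s/t)^{1/2+2\delta}|\Psi|_{\vec{n}}$ from \eqref{eq1-10-mai-2025} produces exactly the left-hand side of \eqref{eq10-11-mai-2025}; the task therefore reduces to controlling $\|L^I u_k\|_{L^2(\Mcal_s^{\Hcal})}$ for $|I|\leq 2$ by the right-hand side of \eqref{eq10-11-mai-2025}, uniformly in $k$, and then taking $k \to \infty$.

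The case $|I|=0$ is immediate from the definition of $u_k$ together with $(s/t) \leq 1$, giving $\|u_k\|_{L^2} \leq \|(s/t)^{1/2}|\Psi|_{\vec{n}}\|_{L^2} + O(k^{-1})$. The case $|I|=1$ is furnished directly by Lemma~\ref{lem1-10-mai-2025}, which produces precisely the weight $(s/t)^{1/2+\delta}$ on the spinorial norms demanded by the right-hand side of \eqref{eq10-11-mai-2025}.

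For the critical case $|I|=2$, I would start from $2u_k L_a u_k = L_a((s/t)^{1+4\delta} A[\Psi])$ and apply $L_b$ to obtain
\[
2 u_k\, L_b L_a u_k \;=\; L_b L_a\bigl((s/t)^{1+4\delta} A[\Psi]\bigr) \;-\; 2\, L_a u_k\, L_b u_k.
\]
The second term on the right is already controlled pointwise by the case $|I|=1$. For the first term, I would iterate Proposition~\ref{prop1-08-june-2025}: expanding $L_a A[\Psi]$ as two Clifford-adapted pairings plus the geometric error $\la\Psi,\vec{w}[L_a]\cdot\Psi\ra_{\ourD}$, and then applying $L_b$ to each pairing through the Leibniz-type identity \eqref{eq3-06-june-2025}. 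This unfolds $L_b L_a A[\Psi]$ as a sum of sesquilinear contractions of the form $\la \widehat{L}^{I_1}\Psi,\vec{n}\cdot\widehat{L}^{I_2}\Psi\ra_{\ourD}$ with $|I_1|+|I_2|\leq 2$, plus geometric error vectors controlled by Proposition~\ref{lem3-08-aout-2025} under the decay hypothesis \eqref{eq1-09-aout-2025}. Dividing through by the lower bound on $u_k$ and collecting weights should then yield a pointwise estimate of the form
\[
|L_b L_a u_k| \;\lesssim\; (s/t)^{1/2+2\delta}\sum_{|J|\leq 2}|\widehat{L}^J\Psi|_{\vec{n}} + (s/t)^{1/2+\delta}|\widehat{L}\Psi|_{\vec{n}} + (s/t)^{1/2}|\Psi|_{\vec{n}},
\]
in which each term fits precisely into the right-hand side of \eqref{eq10-11-mai-2025}. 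Taking $L^2$ norms and passing $k \to \infty$ (monotone convergence on the left, dominated convergence on the right) concludes.

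The principal obstacle will be the bookkeeping of weights and of the $u_k^{-1}$ factors in the second-order step: the cross term $L_a u_k \cdot L_b u_k$ comes with only a single $u_k$ in the denominator, and one must verify that the specific exponent $1+4\delta$ in the definition of $u_k$ — chosen so that the lower bound $u_k \gtrsim (s/t)^{1/2+2\delta}|\Psi|_{\vec{n}}$ exactly matches twice the first-order weight $(s/t)^{1/2+\delta}$ produced by Lemma~\ref{lem1-10-mai-2025} — delivers a bound independent of $k$. Should algebraic issues arise in the direct estimate, the auxiliary regularization $v_k$ (also supplied by Lemma~\ref{lem1-10-mai-2025}) can be inserted as an intermediate quantity to separate the two levels of differentiation.
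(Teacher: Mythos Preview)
Your approach has a genuine gap at the second-order step. The pointwise estimate you claim for $L_bL_au_k$ cannot hold uniformly in $k$. To see this, take a point where $\Psi=0$ but $\widehat{L_a}\Psi\neq 0$. There $u_k=k^{-1}$, and since every term in $L_af=L_a\bigl((s/t)^{1+4\delta}A[\Psi]\bigr)$ contains a factor of $\Psi$ (via Proposition~\ref{prop1-08-june-2025}), one has $L_af=0$ and hence $L_au_k=0$; the cross term $L_au_k\,L_bu_k$ therefore vanishes. But iterating Proposition~\ref{prop1-08-june-2025} and the Leibniz rule \eqref{eq4-04-july-2025} shows that $L_bL_aA[\Psi]$ contains the contribution $2\Re\la\widehat{L_a}\Psi,\vec{n}\cdot\widehat{L_b}\Psi\ra_{\ourD}$, which survives at a zero of $\Psi$. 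Hence
\[
|L_bL_au_k| = \frac{|L_bL_af|}{2u_k} \sim k\,(s/t)^{1+4\delta}\,|\widehat{L}\Psi|_{\vec{n}}^2
\]
at such points, which blows up as $k\to\infty$ and certainly does not match your claimed bound. This is the well-known obstruction to proving Sobolev inequalities for vector-valued fields by passing to $|\Psi|$: second derivatives of the norm produce a term of type $|\del\Psi|^2/|\Psi|$ with no cancellation available when $\dim>1$. Your closing remark about inserting $v_k$ does not address this, since $v_k$ has exactly the same structure.

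The paper avoids the second derivative of $u_k$ entirely by splitting the scalar Klainerman--Sobolev inequality into two one-derivative steps via Lemma~\ref{lem2-15-mai-2025}: first \eqref{eq8-15-mai-2025} gives the $L^2\to L^6$ bound on $u_k$ using only $L_au_k$, and then \eqref{eq7-15-mai-2025} gives the $L^6\to L^\infty$ bound, again using only $L_au_k$. At the second stage one needs $\|L_au_k\|_{L^6}$; the pointwise bound \eqref{eq5-15-mai-2025} converts this to $\|(s/t)^{1/2+\delta}|\Psi|_{\vec{n}}\|_{L^6}+\|(s/t)^{1/2+2\delta}|\widehat{L}\Psi|_{\vec{n}}\|_{L^6}$, and one then reapplies the $L^2\to L^6$ step, the first summand via $v_k$ (giving \eqref{eq5-16-mai-2025}) and the second via $u_k[\widehat{L}^I\Psi]$ (giving \eqref{eq8-16-june-2025} with $\Psi$ replaced by $\widehat{L}^I\Psi$; note the hypothesis \eqref{eq9-11-mai-2025} is purely geometric and applies to any spinor). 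In this scheme only first derivatives of the regularized scalars are ever differentiated, and the $|\widehat{L}\Psi|^2/|\Psi|$ singularity never appears.
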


\begin{proof}
The result is based on Lemma~\ref{lem2-15-mai-2025}. Let $(t,x)\in\Mcal^{\Hcal}_s$ with $s = \sqrt{t^2-r^2}$, $r = |x|$. By applying a spatial rotation, we suppose that $x = \hat{x} = - \frac{\sqrt{3}\, r}{3}(1,1,1)$.  We apply \eqref{eq8-15-mai-2025} to $u_k[\Psi]$, and  thanks to \eqref{eq5-15-mai-2025}, we find 
\begin{equation}\label{eq7-16-june-2025}
\aligned
&t\Big(\int_{\Ccal_{t,\hat{x}}}|u_k[\Psi]|^6 \diff x\Big)^{1/6}
\\
& \lesssim 
\Big(\int_{\Mcal^{\Hcal}_s}\big((s/t)^{1+4\delta}|\Psi|_{\vec{n}}^2 + k^{-2}\big)\diff x\Big)^{1/2} 
+ \sum_{a}\Big(\int_{\Mcal^{\Hcal}_s}|L_au_k|^2\diff x\Big)^{1/2}
\\
& \lesssim 
\Big(\int_{\Mcal^{\Hcal}_s}\big((s/t)^{1+2\delta}|\Psi|_{\vec{n}}^2 + k^{-2}\big)\diff x\Big)^{1/2} 
+ \sum_{a}\Big(\int_{\Mcal^{\Hcal}_s}|\widehat{L_a}\Psi|_{\vec{n}}^2(s/t)^{1+4\delta}\diff x\Big)^{1/2}.
\endaligned
\end{equation}
We now take the limit $k\rightarrow \infty$. Observe that $\Mcal^{\Hcal}_s$ is of finite measure with respect to $\diff x$. Thanks to \eqref{eq1-10-mai-2025} (applied to the left-hand side) and Lebesgue's dominant convergence theorem (applied to the first term in the right-hand side), we obtain
\begin{subequations}
\begin{equation}\label{eq8-16-june-2025}
\aligned
t\Big(\int_{\Ccal_{t,\hat{x}}}\big|(s/t)^{1/2+2\delta}|\Psi|_{\vec{n}}\big|^6 \diff x\Big)^{1/6}
& \lesssim  \Big(\int_{\Mcal^{\Hcal}_s}|\Psi|_{\vec{n}}^2(s/t)^{1+2\delta}\diff x\Big)^{1/2}
\\
& \quad + \sum_{a}\Big(\int_{\Mcal^{\Hcal}_s}|\widehat{L_a}\Psi|_{\vec{n}}^2(s/t)^{1+4\delta}\diff x\Big)^{1/2}.
\endaligned
\end{equation}
On the other hand, applying the same argument to $v_k[\Psi]$, we also obtain
\begin{equation}\label{eq5-16-mai-2025}
\aligned
t\Big(\int_{\Ccal_{t,\hat{x}}}\big|(s/t)^{1/2+\delta}|\Psi|_{\vec{n}}\big|^6 \diff x\Big)^{1/6}
& \lesssim  \Big(\int_{\Mcal^{\Hcal}_s}|\Psi|_{\vec{n}}^2(s/t)\diff x\Big)^{1/2}
\\
& \quad + \sum_{a}\Big(\int_{\Mcal^{\Hcal}_s}|\widehat{L_a}\Psi|_{\vec{n}}^2(s/t)^{1+2\delta}\diff x\Big)^{1/2}.
\endaligned
\end{equation}
\end{subequations}

Next, applying \eqref{eq7-15-mai-2025} to $u_k[\Psi]$ and recalling \eqref{eq7-16-june-2025} and \eqref{eq5-15-mai-2025}, we find 
\be
\aligned
t^{1/2}|u_k(t,\hat{x})|& \lesssim  \Big(\int_{\Ccal_{t,\hat{x}}}|u_k|^6\diff x\Big)^{1/6} 
+ \sum_{a}\Big(\int_{\Ccal_{t,\hat{x}}}|L_au_k|^6\diff x\Big)^{1/6}
\\
& \lesssim  
t^{-1}\Big(\int_{\Mcal^{\Hcal}_s}\big((s/t)^{1+2\delta}|\Psi|_{\vec{n}}^2 + k^{-2}\big)\diff x\Big)^{1/2} 
+ t^{-1}\sum_{a}\Big(\int_{\Mcal^{\Hcal}_s}|\widehat{L_a}\Psi|_{\vec{n}}^2(s/t)^{1+4\delta}\diff x\Big)^{1/2}
\\
& \quad +  \Big(\int_{\Mcal^{\Hcal}_s}\big|(s/t)^{1/2+\delta}|\Psi|_{\vec{n}}\big|^6\Big)^{1/6} 
+\sum_{|I|= 1}
\Big(\int_{\Ccal_{t,\hat{x}}}\big|(s/t)^{1/2+2\delta}|\widehat{L}^I\Psi|_{\vec{n}}\big|^6\diff x\Big)^{1/6}.
\endaligned
\ee
For the last two terms, we apply \eqref{eq5-16-mai-2025} and \eqref{eq8-16-june-2025}, respectively:
\be
\aligned
t^{3/2}|u_k(t,\hat{x})|
& \lesssim 
\Big(\int_{\Mcal^{\Hcal}_s}\big((s/t)^{1+2\delta}|\Psi|_{\vec{n}}^2 + k^{-2}\big)\diff x\Big)^{1/2} 
+ \sum_{a}\Big(\int_{\Mcal^{\Hcal}_s}|\widehat{L_a}\Psi|_{\vec{n}}^2(s/t)^{1+4\delta}\diff x\Big)^{1/2}
\\
& \quad + \Big(\int_{\Mcal^{\Hcal}_s}|\Psi|_{\vec{n}}^2(s/t)^{1+2\delta}\diff x\Big)^{1/2}
+\sum_{a}\Big(\int_{\Mcal^{\Hcal}_s}|\widehat{L_a}\Psi|_{\vec{n}}^2(s/t)^{1+4\delta}\diff x\Big)^{1/2}
\\
& \quad +  \Big(\int_{\Mcal^{\Hcal}_s}|\Psi|_{\vec{n}}^2(s/t)\diff x\Big)^{1/2}
+\sum_{a}\Big(\int_{\Mcal^{\Hcal}_s}|\widehat{L_a}\Psi|_{\vec{n}}^2(s/t)^{1+2\delta}\diff x\Big)^{1/2}
\\
& \quad + \sum_{a}\Big(\int_{\Mcal^{\Hcal}_s}|\widehat{L_a}\Psi|_{\vec{n}}^2(s/t)^{1+4\delta}\diff x\Big)^{1/2}
+\sum_{a,b}\Big(\int_{\Mcal^{\Hcal}_s}|\widehat{L_b}\widehat{L_a}\Psi|_{\vec{n}}^2(s/t)^{1+4\delta}\diff x\Big)^{1/2}.
\endaligned
\ee
Then again, by \eqref{eq1-10-mai-2025} and Lebesgue's dominant convergence theorem, we obtain \eqref{eq10-11-mai-2025}.
\end{proof}

\begin{proof}[Proof of Theorem~\ref{thm2-05-april-2025} in the hyperboloidal region]
Since the assumptions \eqref{eq5a-05-april-2025} together with \eqref{eq-USA-condition} guarantee \eqref{eq3-03-july-2025}, we can apply Lemma~\ref{prop1-16-mai-2025} and conclude that the inequality \eqref{eq9-16-june-2025} holds.
\end{proof}

}

%------------------------------------------------------------------------------------------------

\subsection{ Estimates concerning the tangent radial derivative $\delb_r$}
\label{section===96}

{ 

\paragraph{Aim.}

Our next task is establishing the  following estimate.

\begin{proposition}
\label{lem1-11-aout-2025}
Assume that \eqref{eq-USA-condition} holds with a sufficiently small $\eps_s$. Then one has 
\begin{equation}\label{eq6-11-aout-2025}
|\nabla_{\delb_r}\vec{n}|_{\vec{n}}\lesssim_{\delta} (\zeta/\zetab)^2\zeta^{- \delta} + \zetab^{-2}|\del H| \quad \text{ in } \Mcal^{\ME}_{[s_0,s_1]},\quad 0 < \delta \leq 1,
\end{equation}
\begin{equation}\label{eq5-19-aout-2025}
|\nabla_{\delb_r}\vec{n}|_{\vec{n}}\lesssim 1 + |\del H| \quad \text{ in } \Mcal^{\Ecal}_{[s_0,s_1]}.
\end{equation}
\end{proposition}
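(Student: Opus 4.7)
The plan is to reduce the estimate to a computation involving the second fundamental form of $\Mcal_s$, and then to exploit the sharp decomposition of $\sigmab^{ab}$ provided by Proposition~\ref{prop1-27-june-2025}. The starting observation is that since $g(\vec{n},\vec{n}) = -1$ is constant, differentiating along $\delb_r$ gives $g(\nabla_{\delb_r}\vec{n}, \vec{n}) = 0$, so $\nabla_{\delb_r}\vec{n}$ is tangent to $\Mcal_s$ and $|\nabla_{\delb_r}\vec{n}|_{\vec{n}} = |\nabla_{\delb_r}\vec{n}|_{\sigmab}$. For any tangent vector $Y$, the identity $g(\nabla_{\delb_r}\vec{n}, Y) = -g(\vec{n}, \nabla_{\delb_r} Y) = \Pi(\delb_r, Y)$, with $\Pi_{ab} := -g(\vec{n}, \nabla_{\delb_a}\delb_b)$ the second fundamental form, leads to the pointwise identity
\begin{equation*}
|\nabla_{\delb_r}\vec{n}|_{\sigmab}^2 \;=\; \sigmab^{ac}\,\Pi_{ra}\,\Pi_{rc},\qquad \Pi_{ra} := (x^b/r)\,\Pi_{ba}.
\end{equation*}

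The next step is to derive an explicit expression for $\Pi_{ra}$. Writing $\nabla_{\delb_a}\delb_b = \nabla^{\eta}_{\delb_a}\delb_b + \Phib_a^\alpha\Phib_b^\beta\Gamma^\gamma_{\alpha\beta}\del_\gamma$ (with Christoffel symbols of the flat metric vanishing in Cartesian coordinates), a direct computation paralleling Lemma~\ref{lem1-13-june-2025} yields $\nabla^{\eta}_{\delb_a}\delb_b = C_{ab}\,\del_t$ with $C_{ab} = r^{-1}\delb_rT(\delta_{ab} - x^ax^b/r^2) + \del_r^2T\cdot x^ax^b/r^2$, and contracting with $x^b/r$ produces the purely radial expression $(x^b/r)C_{ba} = \del_r^2T\cdot(x^a/r)$. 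Using $\vec{n} = -\lapsb\,\mathrm{grad}(s)$ gives $g(\vec{n},\del_\mu) = -\lapsb\,\del_\mu s$, hence in particular $g(\vec{n},\del_t) = -\lapsb J^{-1}$. Combining these ingredients,
\begin{equation*}
\Pi_{ra} \;=\; \lapsb J^{-1}\,\del_r^2T\,(x^a/r) \;-\; (x^b/r)\,\Phib_a^\beta\Phib_b^\alpha\,\Gamma^\gamma_{\alpha\beta}\,g(\vec{n},\del_\gamma),
\end{equation*}
in which the first summand is the leading geometric contribution while the second is a Christoffel correction of size $|\del H|$.

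The crucial step is the $\sigmab$-contraction of the geometric term. By Claim~\ref{cor1-16-june-2025} we have $\lapsb J^{-1} \lesssim \zetab^{-1}$; by Lemma~\ref{lem1-28-july-2025}, $|\del_r^2 T| \lesssim_{\delta} \zeta^{2-\delta}$; and Proposition~\ref{prop1-27-june-2025} furnishes the decomposition $\sigmab^{ab} = Q^{ab} + M^a_c Q^{cb}$ together with a direct algebraic computation yielding the sharp identity
\begin{equation*}
Q^{ab}\,(x^a/r)(x^b/r) \;=\; \zeta^{-2} \;-\; \frac{|H^{\Ncal00}|}{\zeta^2(\zeta^2+|H^{\Ncal00}|)} \;=\; \frac{\zeta^2}{\zeta^2\,\zetab^2} \;=\; \zetab^{-2},
\end{equation*}
together with the correction bound $|M^a_c Q^{cb}| \lesssim \zetab^{-2}|H|$. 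Hence $\sigmab^{ac}(x^a/r)(x^c/r) \lesssim \zetab^{-2}$, and the geometric contribution to $|\nabla_{\delb_r}\vec{n}|_{\sigmab}$ is bounded by
\begin{equation*}
\lapsb J^{-1}\,|\del_r^2T|\,\big(\sigmab^{ac}(x^a/r)(x^c/r)\big)^{1/2} \;\lesssim_{\delta}\; \zetab^{-1}\cdot \zeta^{2-\delta}\cdot \zetab^{-1} \;=\; (\zeta/\zetab)^2\,\zeta^{-\delta},
\end{equation*}
which is precisely the leading term in \eqref{eq6-11-aout-2025}.

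For the Christoffel correction, the standard bound $|\Gamma^\gamma_{\alpha\beta}|\lesssim|\del H|$ (cf.~the proof of Lemma~\ref{lem1-01-aout-2025}) together with $|g(\vec{n},\del_\gamma)| = \lapsb|\del_\gamma s| \lesssim \lapsb J^{-1} \lesssim \zetab^{-1}$ yields the pointwise estimate $|\Pi_{ra}^{\mathrm{corr}}| \lesssim \zetab^{-1}|\del H|$, and contracting with the coarse bound $|\sigmab^{ac}|\lesssim \zetab^{-2}$ from Claim~\ref{cla1-16-june-2025} produces a contribution of size $\zetab^{-2}|\del H|$, which is the second term in \eqref{eq6-11-aout-2025}. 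The Euclidean estimate \eqref{eq5-19-aout-2025} is immediate since $\zeta$ and $\zetab$ are comparable to $1$ there. The main obstacle I anticipate is the sharp algebraic identity $Q^{ab}(x^a/r)(x^b/r) = \zetab^{-2}$: using only the coarse bound $|\sigmab^{ab}|\lesssim \zetab^{-2}$ from Claim~\ref{cla1-16-june-2025} would give the weaker $(\zeta/\zetab)\zeta^{-\delta}$, whereas the improved factor $(\zeta/\zetab)^2$ requires fully exploiting the radial alignment of the leading part of $\Pi_{ra}$ with the eigendirection of $\sigmab^{ab}$ where the light-bending condition $H^{\Ncal00}<0$ produces the enhanced decay.
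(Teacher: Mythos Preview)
Your proof is correct and follows the same geometric decomposition as the paper: both approaches write $\nabla_{\delb_r}\vec{n}$ via the second fundamental form, isolate the leading radial term carrying the factor $\lapsb J^{-1}\,\delb_r^2T\,(x^a/r)$, and treat the Christoffel remainder separately. The one technical difference is how you extract the crucial $\zetab^{-1}$ gain for the radial contraction. You compute the explicit scalar identity $\sigmab^{ac}(x^a/r)(x^c/r)=\zetab^{-2}$ from the decomposition in Proposition~\ref{prop1-27-june-2025}, whereas the paper simply invokes the pre-packaged vector bound $|\sigmab^{ab}\delb_b|_{\vec{n}}\lesssim\zetab^{-1}$ from Claim~\ref{cla1-16-june-2025} (equation~\eqref{eq4-16-june-2025}), which was proved by the one-line observation $g(\sigmab^{ab}\delb_b,\sigmab^{ac}\delb_c)=\sigmab^{aa}\lesssim\zetab^{-2}$. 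This already holds throughout $\Mcal_{[s_0,s_1]}$ and encodes exactly the radial improvement you re-derive, so your anticipated obstacle dissolves once you notice that~\eqref{eq4-16-june-2025} is available, not just the coarse componentwise bound~\eqref{eq2-16-june-2025}. For~\eqref{eq5-19-aout-2025} the paper goes slightly further than your reduction to $\zeta,\zetab\simeq 1$: it notes that $\delb_r^2T\equiv 0$ in $\Mcal^{\Ecal}_{[s_0,s_1]}$, so the geometric term vanishes identically there.
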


The proof will rely on several preliminary results. We begin with a general identity. 

\begin{lemma}
\label{lem1-03-aout-2025}
The future-oriented unit normal vector $\vec{n}$ satisfies the following identity for any tangent vector $X = \overline{X}^a\delb_a$ to $\Mcal_s$: 
\begin{equation}
\nabla_X\vec{n} = \overline{X}^a\Pi_{ab}\sigmab^{bc}\delb_c, 
\end{equation}
where $\Pi_{ab} = -(\vec{n},\nabla_{\delb_a}\delb_b)_g$ denotes the second fundamental form of $\Mcal_s$. 
\end{lemma}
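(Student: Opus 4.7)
The strategy is the standard one used to express the Weingarten map of a spacelike slice in an adapted basis. First, I would observe that the tangent vectors $\delb_a$ indeed span $T\Mcal_s$ (this is part of the construction of the adapted frame in Section~\ref{section===71}), and that $\vec{n}$ is by definition the $g$-unit normal to $\Mcal_s$, so $(\vec{n},\vec{n})_g=-1$ and $(\vec{n},\delb_a)_g=0$ for every $a$. These two algebraic identities are the only inputs needed beyond the Leibniz rule for the Levi-Civita connection.

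Next, differentiating $(\vec{n},\vec{n})_g=-1$ in the direction $X$ and using compatibility of $\nabla$ with $g$ gives $(\nabla_X\vec{n},\vec{n})_g=0$. Hence $\nabla_X\vec{n}$ is $g$-orthogonal to $\vec{n}$, i.e.\ tangent to $\Mcal_s$, and may be expanded as $\nabla_X\vec{n}=Y^c\delb_c$ for some coefficients $Y^c$. Taking the inner product with $\delb_b$ yields
\begin{equation}
(\nabla_X\vec{n},\delb_b)_g \;=\; Y^c\gb_{cb} \;=\; Y^c\sigmab_{cb}.
\end{equation}

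Then I would differentiate the identity $(\vec{n},\delb_b)_g=0$ in the direction $X=\overline{X}^a\delb_a$. Using compatibility of $\nabla$ with $g$ together with the $C^\infty$-linearity of $\nabla_X$ in $X$, this produces
\begin{equation}
0 \;=\; X\bigl((\vec{n},\delb_b)_g\bigr) \;=\; (\nabla_X\vec{n},\delb_b)_g + (\vec{n},\nabla_X\delb_b)_g
\;=\; (\nabla_X\vec{n},\delb_b)_g + \overline{X}^a(\vec{n},\nabla_{\delb_a}\delb_b)_g.
\end{equation}
Recalling the definition $\Pi_{ab}=-(\vec{n},\nabla_{\delb_a}\delb_b)_g$, this rearranges to $(\nabla_X\vec{n},\delb_b)_g=\overline{X}^a\Pi_{ab}$, so that $Y^c\sigmab_{cb}=\overline{X}^a\Pi_{ab}$.

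Finally, I would contract both sides with the inverse induced metric $\sigmab^{bc}$, which is well defined under the uniform spacelike condition \eqref{eq-USA-condition} by Proposition~\ref{prop1-14-june-2025}, to solve for the coefficients $Y^c=\overline{X}^a\Pi_{ab}\sigmab^{bc}$. Substituting back gives the stated formula. There is no real obstacle here; the only point requiring a little care is to use the $C^\infty(\Mcal)$-linearity of $\nabla$ in the lower slot to legitimately pull the scalar coefficients $\overline{X}^a$ out of $\nabla_X\delb_b$, so that the second fundamental form appears as a tensor on $T\Mcal_s$ independently of how $X$ is expanded.
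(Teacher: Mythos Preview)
Your proposal is correct and follows essentially the same argument as the paper's proof: both show $\nabla_X\vec{n}\perp\vec{n}$ by differentiating $(\vec{n},\vec{n})_g=-1$, expand $\nabla_X\vec{n}$ in the tangent frame $\{\delb_c\}$, and identify the coefficients by differentiating $(\vec{n},\delb_b)_g=0$ and inverting $\sigmab$. Your write-up is simply more explicit about the $C^\infty$-linearity and the invertibility of $\sigmab$; the paper's version is terser but identical in substance.
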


\begin{proof} 
\bse
We recall that 
\be
\big(\nabla_X\vec{n},\vec{n}\big)_g = \frac{1}{2}\Lcal_X\big((\vec{n},\vec{n})_g\big) = 0, 
\ee
and we write $\nabla_X\vec{n} =: B^a\delb_a$. Then, we can decompose 
\be
B^a\gb_{ab} = (\nabla_X\vec{n},\delb_b)_g = - (\vec{n},\nabla_X\delb_b)_g = \Pi(X,\delb_b) 
= X^a\Pi_{ab}, 
\ee
which leads us to the desired expression.
\ese
\end{proof}

%------------------------------------------------------ 

Next, thanks to Lemma~\ref{lem1-03-aout-2025}, we have 
\begin{equation}\label{eq1-11-aout-2025}
\nabla_{\delb_r}\vec{n} = (x^a/r)\Pi_{ab}\sigmab^{bc}\delb_c 
= \Pi(\delb_r,\delb_b)\sigmab^{bc}\delb_c
= -(\vec{n},\nabla_{\delb_r}\delb_b)_g\sigmab^{bc}\delb_c.
\end{equation}
For convenience, we introduce the notation
\be
\omega_{\alpha}(X) = \omega_{\alpha}^{\beta}(X) \del_{\beta} = \nabla_{X}\del_{\alpha}, 
\ee
so that
\be
\nabla_{\delb_r}\delb_b =(x^b/r)\delb_r^2T\del_t 
+ (x^b/r)\delb_rT\omega_0(\delb_r) + \omega_b(\delb_r).
\ee
It then follows that the radial derivative of the normal vector reads 
\begin{equation}
\label{eq3-11-aout-2025}
\nabla_{\delb_r}\vec{n} = - \big(\vec{n},\del_t\big)_g(x^b/r)\delb_r^2T\sigmab^{bc}\delb_c 
- (x^b/r)\delb_rT(\vec{n},\omega_0(\delb_r))_g\sigmab^{bc}\delb_c
- (\vec{n},\omega_b(\delb_r))_g\sigmab^{bc}\delb_c.
\end{equation}

%----------------------------------------------------------

\paragraph{A calculation in the flat case.}

When $g=\eta$, we have $\omega_{\alpha}\equiv 0$ and, therefore, 
\begin{equation}\label{eq7-09-aout-2025}
\vec{n}_{\eta} = \zeta^{-1}\del_t + \zeta^{-1}(x^b/r)\delb_rT\del_b= \zeta\del_t + \zeta^{-1}\delb_rT\delb_r
\quad \text{(flat geometry).} 
\end{equation}
Then \eqref{eq3-11-aout-2025} together with \eqref{eq1-17-mai-2025} leads us to
\begin{equation}\label{eq2-03-aout-2025}
\nabla_{\delb_r}\vec{n} = \zeta^{-3}\delb_r^2T\delb_r
\quad \text{(flat geometry).} 
\end{equation}
Since $|\delb_r|_{\vec{n}} = \zeta$, we have 
\begin{equation}
|\nabla_{\delb_r}\vec{n}|_{\vec{n}} = \zeta^{-2}|\delb_r^2T|\lesssim_{\delta}\zeta^{- \delta}
\quad \text{(flat geometry),} \quad 0 < \delta \leq 1,
\end{equation}
where we used \eqref{eq6-09-aout-2025}.

%------------------------------------------------------------

\paragraph{Proof of Proposition~\ref{lem1-11-aout-2025}.} 

\bse
We begin with the identity 
\be
\aligned
\big(\vec{n},\del_t\big)_g 
& =  J^{-1}\lapsb^{-1}(\delb_0 - \betab^d\delb_d,\delb_0)_g 
= J^{-1}\lapsb^{-1} (\gb_{00} - \gb_{0d}\betab^d) 
\\
& = J^{-1}\lapsb^{-1} (\gb_{00} - \gb_{0d}\sigmab^{de}\gb_{e0})
= - J^{-1}\lapsb.
\endaligned 
\ee
Here we have applied \eqref{eq2-18-oct-2025(l)}. So that, thanks to Claim~\ref{cor1-16-june-2025},
\begin{equation}\label{eq2-11-aout-2025}
\big|\big(\vec{n},\del_t\big)_g \big|\lesssim \zetab^{-1}.
\end{equation}
Recalling \eqref{eq4-16-june-2025}, thanks to Lemma~\ref{lem1-28-july-2025} we find 
\begin{equation}\label{eq4-11-aout-2025}
\big|- \big(\vec{n},\del_t\big)_g(x^b/r)\delb_r^2T\sigmab^{bc}\delb_c \big|_{\vec{n}}
\lesssim \zetab^{-2}\delb_r^2T\lesssim_{\delta} (\zeta/\zetab)^2\zeta^{- \delta}
\lesssim (\zeta/\zetab)^{2- \delta}\zetab^{- \delta}.
\end{equation}
On the other hand, the connection form is bounded by 
\begin{equation}\label{eq5-11-aout-2025}
|\omega_{\alpha}(\del_{\beta})|_{\vec{n}}\lesssim \zetab^{-1}|\del H|.
\end{equation}
Indeed, this is a consequence of Lemma~\ref{lem2-13-july-2025} and the following calculation:
\be
\omega_{\alpha}(\del_{\beta}) 
= \frac{1}{2}g^{\gamma\delta}(\del_{\alpha}g_{\beta\gamma} + \del_{\beta}g_{\alpha\gamma} - \del_{\gamma}g_{\alpha\beta})\del_{\delta}.
\ee
Since $|H|\lesssim \eps_s$, we have $|g^{\alpha\beta}|\lesssim 1$ and, therefore, \eqref{eq5-11-aout-2025} holds. Consequently, we can substitute \eqref{eq4-11-aout-2025}, \eqref{eq5-11-aout-2025}, and \eqref{eq4-16-june-2025} into \eqref{eq3-11-aout-2025}, and we obtain \eqref{eq6-11-aout-2025}.

\ese

For \eqref{eq5-19-aout-2025}, we only need to track the loss $\zeta^{- \delta}$. It is due to \eqref{eq6-09-aout-2025} on the term $\delb_r^2T$. It is clear that this term vanishes in the Euclidean domain $\Mcal^{\Ecal}_{[s_0,s_1]}$ since $\delb_rT\equiv 0$. Therefore, in view of \eqref{eq4-11-aout-2025} we find 
\be
\big|- \big(\vec{n},\del_t\big)_g(x^b/r)\delb_r^2T\sigmab^{bc}\delb_c \big|_{\vec{n}} = 0.
\ee
Moreover, also in the Euclidean domain we have $\zetab^{-1}\lesssim 1$, thanks to \eqref{eq2-14-june-2025} and the fact that $\zeta \equiv 1$. This establishes \eqref{eq5-19-aout-2025}, and this concludes the proof of Proposition~\ref{lem1-11-aout-2025}. 

}

%----------------------------------------------------------------------------------------------

\subsection{ Derivation of the weighted Sobolev inequality in the merging-Euclidean domain}
\label{section===910}

{ 

We now focus on the merging-Euclidean domain $\Mcal^{\ME}_{[s_0,s_1]}$ and derive the desired Sobolev inequality. It is convenient to proceed by assuming for technical reason that $\Psi$ is supported in $\mathcal{B}_R\cap \Mcal^{\ME}_{[s_0,s_1]}$, where $\mathcal{B}_R = \{(t,x), r\in\RR, x\in\RR^3, |x|\leq R\}$. We also use the notation
$
A[\Psi]:=\la \Psi,\vec{n}\cdot\Psi\ra_{\ourD}$, 
and we introduce the weight
\begin{equation}
w(s,r) := (2+r-t)^{2\kappa},
\end{equation}
together with the auxiliary functions, which are regularization of the (square-root) of $A[\Psi]$: 
\be
u_k := \sqrt{\zeta^{1+4\delta}wA[\Psi] + k^{-2}},
\qquad v_k := \sqrt{\zeta^{1+2\delta}wA[\Psi] + k^{-2}},
\qquad k=1,2,\cdots.
\ee
It is clear that 
\begin{equation}
2u_k\geq \max\bigl( \zeta^{1/2+2\delta}|\Psi|_{\vec{n}},k^{-1}\bigr), 
\quad
2v_k\geq \max\bigl( \zeta^{1/2+\delta}|\Psi|_{\vec{n}},k^{-1}\bigr). 
\end{equation}

Let us assume that the metric satisfies the decay condition 
\begin{equation}\label{eq10-11-aout-2025}
\zetab^{-1}|H|_1 + \frac{\zeta^{-2}|H^{\Ncal00}|_1}{1+\zeta^{-2}|H^{\Ncal00}|}\lesssim \zeta^{- \delta},\quad 0 < \delta\leq 1 \quad \text{ in } \Mcal^{\ME}_{[s_0,s_1]}. 
\end{equation}
Under this condition, the estimate \eqref{eq9-11-aout-2025} provides us with the following estimate for rotation fields: 
\begin{equation}\label{eq11-11-aout-2025}
\big|\Omega_{ab}A[\Psi]\big|
\lesssim \zeta^{- \delta}|\Psi|_{\vec{n}}^2 + |\Psi|_{\vec{n}}\sum_{|I|\leq 1}|\widehat{\Omega}^I\Psi|_{\vec{n}}.
\end{equation}
For the radial derivative $\delb_r$, the argument is more involved. 

\begin{lemma}
Assume that the uniform spacelike condition \eqref{eq-USA-condition} holds with a sufficiently small $\eps_s$. Suppose that, in addition to \eqref{eq10-11-aout-2025}, the first-order derivatives of the metric satisfy 
\begin{equation}\label{eq12-11-aout-2025}
\zetab^{-2}|\del H|\lesssim \zeta^{- \delta},\quad 0<\delta\leq 1 \quad \text{ in } \Mcal^{\ME}_{[s_0,s_1]}.
\end{equation}
Then for any spinor field, one has 
\begin{equation}\label{eq6-19-aout-2025}
|\delb_rA[\Psi]|\lesssim_{\delta} \zeta^{- \delta}|\Psi|_{\vec{n}}^2 + |\Psi|_{\vec{n}}\sum_{|I|\leq 1}|\widehat{\del}^I\Psi|_{\vec{n}}, \quad \text{ in }  \Mcal^{\ME}_{[s_0,s_1]},
\end{equation}
\begin{equation}\label{eq7-19-aout-2025}
|\delb_rA[\Psi]|\lesssim |\Psi|_{\vec{n}}^2 + |\Psi|_{\vec{n}}\sum_{|I|\leq 1}|\widehat{\del}^I\Psi|_{\vec{n}},\quad \text{ in }  \Mcal^{\E}_{[s_0,s_1]}.
\end{equation}
\end{lemma}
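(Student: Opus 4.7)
The plan is to apply Proposition~\ref{prop1-08-june-2025} with $X = \delb_r$, which is tangent to the slice $\Mcal_s$. This yields
\begin{equation*}
\delb_r A[\Psi] = \la\widehat{\delb_r}\Psi,\vec{n}\cdot\Psi\ra_{\ourD} + \la\Psi,\vec{n}\cdot\widehat{\delb_r}\Psi\ra_{\ourD} + \la\Psi,\vec{w}[\delb_r]\cdot\Psi\ra_{\ourD},
\end{equation*}
with $\vec{w}[\delb_r] = \tfrac{1}{2}[\delb_r,\vec{n}]^{\top} - \tfrac{1}{4}g^{\alpha\beta}\pi[\delb_r]_{\alpha\beta}\vec{n}$. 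By Lemma~\ref{prop2-04-mai-2025} and Lemma~\ref{lem1-06-april-2025}, it is enough to establish the two pointwise estimates
\begin{equation*}
|\widehat{\delb_r}\Psi|_{\vec{n}} \lesssim_{\delta} \sum_{|I|\leq 1}|\widehat{\del}^I\Psi|_{\vec{n}} + \zeta^{-\delta}|\Psi|_{\vec{n}},
\qquad
|\vec{w}[\delb_r]|_{\vec{n}}\lesssim_{\delta} \zeta^{-\delta},
\end{equation*}
from which the claim on $|\delb_r A[\Psi]|$ follows by Cauchy--Schwarz.

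The first estimate is obtained from the decomposition $\delb_r = \delb_rT\,\del_t + (x^a/r)\del_a$ together with the Leibniz-type rule~\eqref{eq3-04-july-2025}, which gives
\begin{equation*}
\widehat{\delb_r}\Psi = \delb_rT\,\widehat{\del_t}\Psi + (x^a/r)\,\widehat{\del_a}\Psi
- \tfrac{1}{4}g^{\mu\nu}\del_\mu\cdot\bigl[(\del_\nu\delb_rT)\del_t + \del_\nu(x^a/r)\,\del_a\bigr]\cdot\Psi.
\end{equation*}
The coefficients $\delb_rT$ and $x^a/r$ are bounded, so the first two terms contribute exactly the claimed admissible derivatives. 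The Clifford correction is controlled by Lemma~\ref{lem1'-18-july-2025} combined with $|\del_\nu\delb_rT|\lesssim_{\delta}\zeta^{-\delta}$ (a consequence of Lemmas~\ref{lem1-28-july-2025} and~\ref{lem2-05-aout-2025}) and $|\del_\nu(x^a/r)|\lesssim r^{-1}\lesssim 1$ in $\Mcal^{\ME}$; under the smallness condition~\eqref{eq-USA-condition} the residual $\zetab^{-1}$ factor is absorbed upon relabelling $\delta$ since $\zetab\lesssim 1$.

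For the geometric error, I split $[\delb_r,\vec{n}] = \nabla_{\delb_r}\vec{n} - \nabla_{\vec{n}}\delb_r$. The term $\nabla_{\delb_r}\vec{n}$ is bounded by Proposition~\ref{lem1-11-aout-2025}, which under~\eqref{eq12-11-aout-2025} gives $|\nabla_{\delb_r}\vec{n}|_{\vec{n}}\lesssim_{\delta}\zeta^{-\delta}$. The term $\nabla_{\vec{n}}\delb_r$ is computed directly from the decomposition of $\delb_r$, with each coefficient derivative bounded by $\zeta^{-\delta}$ and the Christoffel contributions controlled by $|\del H|\lesssim\zetab^2\zeta^{-\delta}$ via Lemma~\ref{lem1-01-aout-2025}. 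The contracted deformation tensor $g^{\alpha\beta}\pi[\delb_r]_{\alpha\beta}=2\,\mathrm{div}_g(\delb_r)$ reduces, modulo curvature errors of size $|\del H|$, to $2\del_t(\delb_rT) + 2\del_a(x^a/r)$, which is bounded by $\zeta^{-\delta} + r^{-1}\lesssim\zeta^{-\delta}$.

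In the Euclidean region $\Mcal^{\Ecal}_{[s_0,s_1]}$, one has $\delb_rT\equiv 0$, $\delb_r=(x^a/r)\del_a$, and $\zeta\equiv 1$, so each intermediate bound collapses to $O(1)$ and~\eqref{eq7-19-aout-2025} follows without any loss. The main technical difficulty lies in the merging region, where two competing $\zetab^{-1}$ losses---one from estimating Clifford products $\del_\mu\cdot\del_\alpha\cdot\Psi$ in $|\cdot|_{\vec{n}}$, and one from the non-orthogonality of $\del_t$ to the slices---must be absorbed without destroying the summability of $\zeta^{-\delta}$. This forces the use of the refined cut-off estimates $|\delb_r^2T|\lesssim_{\delta}\zeta^{2-\delta}$ from Lemma~\ref{lem1-05-aout-2025} and~\eqref{eq6-09-aout-2025} in place of the coarser linear-in-$\zeta$ bound.
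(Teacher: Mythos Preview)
Your route through Proposition~\ref{prop1-08-june-2025} is more involved than the paper's and contains a real gap. The paper does \emph{not} use the $\widehat{X}$/$\vec{w}[X]$ machinery here; it applies the basic compatibility of $\nabla$ with the Dirac form and Clifford product directly:
\[
\delb_r\la\Psi,\vec{n}\cdot\Psi\ra_{\ourD}
=\la\nabla_{\delb_r}\Psi,\vec{n}\cdot\Psi\ra_{\ourD}
+\la\Psi,\nabla_{\delb_r}\vec{n}\cdot\Psi\ra_{\ourD}
+\la\Psi,\vec{n}\cdot\nabla_{\delb_r}\Psi\ra_{\ourD}.
\]
Now $\nabla_{\delb_r}\vec{n}$ is exactly the quantity bounded in Proposition~\ref{lem1-11-aout-2025}, and $|\nabla_{\delb_r}\Psi|_{\vec{n}}\lesssim\sum_\alpha|\nabla_\alpha\Psi|_{\vec{n}}\lesssim\sum_\alpha|\widehat{\del_\alpha}\Psi|_{\vec{n}}+\zetab^{-1}|\del H||\Psi|_{\vec{n}}$. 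The only $\zetab^{-1}$ that appears is multiplied by $|\del H|$, and $\zetab^{-1}|\del H|=\zetab\cdot(\zetab^{-2}|\del H|)\lesssim\zeta^{-\delta}$ since $\zetab\lesssim 1$.

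By contrast, your approach produces the Clifford correction $g^{\mu\nu}(\del_\nu\delb_rT)\del_\mu\cdot\del_t\cdot\Psi$ inside $\widehat{\delb_r}\Psi$, and the coefficient $|\del_\nu\delb_rT|$ is only $\lesssim_\delta\zeta^{-\delta}$ with no compensating $\zetab^2$. After Lemma~\ref{lem1'-18-july-2025} you get $\zetab^{-1}\zeta^{-\delta}|\Psi|_{\vec{n}}$. Your justification ``absorbed upon relabelling $\delta$ since $\zetab\lesssim 1$'' is backwards: $\zetab\lesssim 1$ gives $\zetab^{-1}\gtrsim 1$, and the only available upper bound $\zetab^{-1}\leq\zeta^{-1}$ yields $\zeta^{-1-\delta}$, which exceeds the allowed $\zeta^{-\delta'}$ for any $\delta'\le 1$ once $\delta>0$. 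The same unabsorbable loss appears in your estimate of $\nabla_{\vec{n}}\delb_r$ for the $\vec{w}[\delb_r]$ term. The fix is to bypass $\widehat{\delb_r}$ entirely and use the plain Leibniz rule as the paper does.
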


\begin{proof}
\bse
In order to establish \eqref{eq6-19-aout-2025}, we write 
\be
\delb_r\big(\la \Psi,\vec{n}\cdot\Psi\ra_{\ourD}\big) 
= \la\nabla_{\delb_r}\Psi,\vec{n}\cdot\Psi \ra_{\ourD}
+ \la\Psi,\vec{n}\cdot\nabla_{\delb_r}\Psi \ra_{\ourD}
+ \la\Psi,\nabla_{\delb_r}\vec{n}\cdot\Psi \ra_{\ourD},
\ee
therefore 
\begin{equation}
\big|\delb_r\big(\la \Psi,\vec{n}\cdot\Psi\ra_{\ourD}\big) \big|
\lesssim |\Psi|_{\vec{n}}\sum_{\alpha}|\nabla_{\alpha}\Psi|_{\vec{n}} 
+ |\nabla_{\delb_r}\vec{n}|_{\vec{n}}|\Psi|_{\vec{n}}^2.
\end{equation}
The second term in the right-hand side is bounded thanks to \eqref{eq6-11-aout-2025}, while for the first term, we have
\be
|\nabla_{\del_{\alpha}}\Psi|_{\vec{n}}\leq |\widehat{\del_{\alpha}}\Psi|_{\vec{n}} 
+ \frac{1}{4}\big|g^{\mu\nu}\del_{\mu}\cdot\nabla_{\nu}\del_{\alpha}\cdot\Psi\big|_{\vec{n}}
\lesssim |\widehat{\del_{\alpha}}\Psi|_{\vec{n}} + \zetab^{-1}|\del H||\Psi|_{\vec{n}}.
\ee
Consequently, it follows that 
\be
\big|\delb_r\big(\la \Psi,\vec{n}\cdot\Psi\ra_{\ourD}\big) \big|
\lesssim_{\delta} \zeta^{- \delta}\big((\zeta/\zetab)^2 + \zetab^{-2}\zeta^{\delta}|\del H|\big)|\Psi|_{\vec{n}}^2 + |\Psi|_{\vec{n}}\sum_{\alpha}|\widehat{\del_{\alpha}}\Psi|_{\vec{n}}, 
\ee
and, in view of \eqref{eq12-11-aout-2025}, we arrive at \eqref{eq6-19-aout-2025}.

On the other hand, to establish \eqref{eq7-19-aout-2025} we only need to apply \eqref{eq5-19-aout-2025} instead of \eqref{eq6-11-aout-2025}, and then observe that the weight $\zetab^{-1}\lesssim 1$ is bounded in $\Mcal^{\ME}_{[s_0,s_1]}$. Indeed, the latter claim follows from \eqref{eq2-14-june-2025} and the fact that $\zeta\equiv 1$.
\ese
\end{proof}

Next, we proceed as in the hyperboloidal region and establish the following estimates.

\begin{lemma}\label{lem1-15-mai-2025}
Consider sufficiently regular spinor fields $\Psi$ defined in $\Mcal^{\ME}_{[s_0,s_1]}$. Then provided, 
in the merging-Euclidean domain $\Mcal^{\ME}_{[s_0,s_1]}$, 
\begin{equation}\label{eq4-15-mai-2025}
\aligned
& \big|\Omega_{ab}A[\Psi]\big|
\lesssim \zeta^{- \delta}|\Psi|_{\vec{n}}^2 + |\Psi|_{\vec{n}}\sum_{|I|\leq 1}|\widehat{\Omega}^I\Psi|_{\vec{n}},
\\
&|\delb_rA[\Psi]|\lesssim_{\delta} \zeta^{- \delta}|\Psi|_{\vec{n}}^2 + |\Psi|_{\vec{n}}\sum_{|I|\leq 1}|\widehat{\del}^I\Psi|_{\vec{n}}, 
\endaligned
\end{equation}
then, for any $X = \Omega_{ab}$ or $X=\delb_r$, one also has in $\Mcal^{\ME}_{[s_0,s_1]}$ 
\begin{equation}\label{eq9-16-mai-2025}
\big|X(u_k)\big|
\lesssim_{\delta} \zeta^{1/2+\delta}w^{1/2}|\Psi|_{\vec{n}} + \zeta^{1/2+2\delta}w^{1/2}
\sum_{|I|\leq 1}
\big(\big|\widehat{\Omega}^I\Psi\big|_{\vec{n}}+\big|\widehat{\del}^I\Psi\big|_{\vec{n}}\big),
\end{equation}
\begin{equation}\label{eq10-16-mai-2025}
\big|X(v_k)\big|
\lesssim_{\delta}
\zeta^{1/2}w^{1/2}|\Psi|_{\vec{n}} + \zeta^{1/2+\delta}w^{1/2}
\sum_{|I|\leq 1}
\big(\big|\widehat{\Omega}^I\Psi\big|_{\vec{n}}+\big|\widehat{\del}^I\Psi\big|_{\vec{n}}\big),
\end{equation}
and, in the Euclidean domain $\Mcal^{\Ecal}_{[s_0,s_1]}$,
\begin{equation}\label{eq8-19-aout-2025}
\big|X(u_k)\big|
\lesssim w^{1/2}|\Psi|_{\vec{n}} + w^{1/2}
\sum_{|I|\leq 1}\big(\big|\widehat{\Omega}^I\Psi\big|_{\vec{n}}+\big|\widehat{\del}^I\Psi\big|_{\vec{n}}\big),
\end{equation}
\begin{equation}\label{eq9-19-aout-2025}
\big|X(v_k)\big|
\lesssim
w^{1/2}|\Psi|_{\vec{n}} + w^{1/2}
\sum_{|I|\leq 1}\big(\big|\widehat{\Omega}^I\Psi\big|_{\vec{n}}+\big|\widehat{\del}^I\Psi\big|_{\vec{n}}\big). 
\end{equation}
\end{lemma}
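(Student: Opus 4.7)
The plan is to treat this lemma as the merging-Euclidean analog of Lemma~\ref{lem1-10-mai-2025}: compute $X(u_k^2)$ and $X(v_k^2)$ directly via the product and chain rules, bound them pointwise using the standing hypothesis \eqref{eq4-15-mai-2025}, and then divide by $2u_k$ or $2v_k$ using the elementary lower bounds
\[
2u_k \geq \max\bigl(\zeta^{1/2+2\delta} w^{1/2} |\Psi|_{\vec{n}},\, k^{-1}\bigr), \qquad 2v_k \geq \max\bigl(\zeta^{1/2+\delta} w^{1/2} |\Psi|_{\vec{n}},\, k^{-1}\bigr),
\]
which follow from the definitions, together with the identity $|u_k||X u_k| = \tfrac{1}{2}|X(u_k^2)|$ (and likewise for $v_k$). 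The role of the regularization parameter $k$ is purely to make $u_k,v_k$ smooth enough to differentiate without worrying about the zero set of $A[\Psi]$; it drops out of the final estimates thanks to the weaker lower bound $u_k \geq k^{-1}/2$ not being needed in the statement.

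\textbf{The case $X = \Omega_{ab}$.} Both $\zeta = \zeta(s,r)$ and $w = (2+r-t)^{2\kappa}$ are radial in $x$, so $\Omega_{ab}\zeta = 0$ and $\Omega_{ab}w = 0$. Hence $\Omega_{ab}(u_k^2) = \zeta^{1+4\delta} w\,\Omega_{ab}(A[\Psi])$ and the first hypothesis in \eqref{eq4-15-mai-2025} yields
\[
|\Omega_{ab}(u_k^2)| \lesssim \zeta^{1+3\delta} w\, |\Psi|_{\vec{n}}^2 + \zeta^{1+4\delta} w\, |\Psi|_{\vec{n}}\sum_{|I|\leq 1}|\widehat{\Omega}^I\Psi|_{\vec{n}}.
\]
Dividing by $2u_k$ and using $u_k \geq \tfrac{1}{2}\zeta^{1/2+2\delta} w^{1/2}|\Psi|_{\vec{n}}$ produces exactly the right-hand side of \eqref{eq9-16-mai-2025}; the argument for $v_k$ uses $v_k \geq \tfrac{1}{2}\zeta^{1/2+\delta}w^{1/2}|\Psi|_{\vec{n}}$ and shifts every exponent of $\zeta$ by $-\delta$, giving \eqref{eq10-16-mai-2025}.

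\textbf{The case $X = \delb_r$.} Here one expands
\[
\delb_r(u_k^2) = (1+4\delta)\zeta^{4\delta}(\delb_r\zeta)\,w\,A[\Psi] + \zeta^{1+4\delta}(\delb_r w)A[\Psi] + \zeta^{1+4\delta}w\,\delb_r A[\Psi],
\]
and one uses: (i) $|\delb_r\zeta|\lesssim_\delta \zeta^{1-\delta}\,\delb_rT \leq \zeta^{1-\delta}$ from Lemma~\ref{lem1-28-july-2025}; (ii) $|\delb_r w| \lesssim w$, since $\delb_r(2+r-t) = 1-\delb_r T\in[0,1]$ and $(2+r-t)^{-1}\leq 1/2$; and (iii) the second hypothesis of \eqref{eq4-15-mai-2025} on $|\delb_r A[\Psi]|$. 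The resulting upper bound for $|\delb_r(u_k^2)|$ has four types of summands; dividing each by $2u_k \geq \zeta^{1/2+2\delta}w^{1/2}|\Psi|_{\vec{n}}$ and simplifying the $\zeta$-exponents places each one under the majorant $\zeta^{1/2+\delta}w^{1/2}|\Psi|_{\vec{n}} + \zeta^{1/2+2\delta}w^{1/2}\sum_{|I|\leq 1}|\widehat{\del}^I\Psi|_{\vec{n}}$, establishing \eqref{eq9-16-mai-2025}. The bound \eqref{eq10-16-mai-2025} for $v_k$ follows by the same chain rule computation with exponents adjusted by $-\delta$.

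\textbf{Euclidean domain and main obstacle.} In $\Mcal^{\Ecal}_{[s_0,s_1]}$ one has $\zeta\equiv 1$, $\delb_rT\equiv 0$, and $\delb_r\zeta = 0$, so only the $\delb_r w$ and $\delb_r A[\Psi]$ terms survive; applying \eqref{eq7-19-aout-2025} (which has no $\zeta^{-\delta}$ loss) directly yields \eqref{eq8-19-aout-2025} and \eqref{eq9-19-aout-2025}. The only delicate point in the whole argument is the bookkeeping of the weights $\zeta^\alpha w^\beta$: one must verify that after dividing by $u_k$ (or $v_k$), each of the four terms produced by the Leibniz rule for $\delb_r(u_k^2)$ indeed falls below the borderline majorant involving either $\zeta^{1/2+\delta}$ (for the $|\Psi|_{\vec{n}}$ contribution) or $\zeta^{1/2+2\delta}$ (for the derivative contributions); this is where the sharp exponent $1-\delta$ in the estimate of $|\delb_r\zeta|$ from Lemma~\ref{lem1-28-july-2025} is crucial, since a weaker exponent would break the balance and spoil the subsequent application in the proof of Theorem~\ref{thm2-05-april-2025} in the merging-Euclidean region.
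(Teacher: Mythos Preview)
Your proposal is correct and follows exactly the approach the paper itself indicates: the paper's proof of this lemma is only a sketch that points back to Lemma~\ref{lem1-10-mai-2025} and highlights the two new ingredients $|X\zeta|\lesssim_\delta \zeta^{1-\delta}$ and $|Xw|\lesssim w$, which is precisely what you carry out in detail. One tiny slip: in the merging--Euclidean domain one only has $2+r-t\geq 1$ (equality at the conical boundary $r=t-1$), so $(2+r-t)^{-1}\leq 1$ rather than $\leq 1/2$, but this does not affect your conclusion $|\delb_r w|\lesssim w$.
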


\begin{proof} We only sketch the proof, since it almost identical to the proof of Proposition~\ref{lem1-10-mai-2025}. The only difference are that we now rely on the bounds 
$$
|X\zeta|\lesssim_{\delta}\zeta^{1- \delta} \qquad \text{ to be compared with } |L_a(s/t)|\lesssim (s/t)
$$
and $|Xw|\lesssim w$. 
\end{proof}

We continue with the same strategy as in the hyperboloidal region.

\begin{lemma}\label{lem2-11-aout-2025}
Consider any sufficiently regular spinor field $\Psi$ defined in $\Mcal^{\ME}_{[s_0,s_1]}$ satisfying, in $\Mcal^{\ME}_{[s_0,s_1]}$, 
\begin{equation}\label{eq11-16-mai-2025}
\aligned
& \big|\Omega_{ab}A[\Psi]\big|
\lesssim \zeta^{- \delta}|\Psi|_{\vec{n}}^2 + |\Psi|_{\vec{n}}\sum_{|I|\leq 1}|\widehat{\Omega}^I\Psi|_{\vec{n}},
\\
&|\delb_rA[\Psi]|\lesssim_{\delta} \zeta^{- \delta}|\Psi|_{\vec{n}}^2 + |\Psi|_{\vec{n}}\sum_{|I|\leq 1}|\widehat{\del}^I\Psi|_{\vec{n}}
\endaligned
\end{equation}
for some exponent $0<\delta\leq 1$. Then, in the merging-Euclidean domain $\Mcal^{\ME}_{[s_0,s_1]}$ for $0\leq \kappa\leq 1$ one has 
\begin{equation}\label{eq12-16-mai-2025}
r(2+r-t)^{\kappa}\zeta^{1/2+2\delta}|\Psi|_{\vec{n}}
\lesssim_{\delta}
\sum_{j+|K|\leq 2}\|(2+r-t)^{\kappa}\zeta|\delb_r^j\Omega^K\Psi|_{\vec{n}}\|_{L^2(\Mcal^{\ME}_s)},
\end{equation}
while, in the Euclidean domain $\Mcal^{\Ecal}_{[s_0,s_1]}$,
\begin{equation}\label{eq10-19-aout-2025}
r(2+r-t)^{\kappa}|\Psi|_{\vec{n}}
\lesssim
\sum_{j+|K|\leq 2}\|(2+r-t)^{\kappa}|\delb_r^j\Omega^K\Psi|_{\vec{n}}\|_{L^2(\Mcal^{\Ecal}_s)},
\end{equation}

\end{lemma}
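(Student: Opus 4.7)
The strategy mirrors the derivation of Lemma~\ref{prop1-16-mai-2025} in the hyperboloidal region, but with the rotations $\Omega_{ab}$ and the tangent radial derivative $\delb_r$ replacing the boosts $L_a$. The hypotheses \eqref{eq11-16-mai-2025} together with Lemma~\ref{lem1-15-mai-2025} provide the weighted derivative estimates \eqref{eq9-16-mai-2025}--\eqref{eq10-16-mai-2025} for the regularized scalars $u_k$ and $v_k$, which avoid the non-smoothness of $\sqrt{A[\Psi]}$ at its zeros. The argument then reduces to a scalar Sobolev embedding in $\RR^3$ adapted to the splitting of $\mathscr{Z}$ into the angular generators $\Omega_{ab}$ and the radial generator $\delb_r$.

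First I would fix $(t,x)\in\Mcal^{\ME}_s$, write $r=|x|$, $\omega=x/r$, and bound $u_k(t,r\omega)$ by applying the Sobolev embedding $H^2(S^2)\hookrightarrow L^\infty(S^2)$ to $\varphi\mapsto u_k(t,r\varphi)$, which picks up two angular derivatives and hence at most two $\Omega^K$. Radially, since $\Psi$ (and therefore $u_k-k^{-1}$) vanishes outside $\mathcal{B}_R$, the one-dimensional identity
\[
r^{2}|u_k(t,r\omega)|^{2}\leq 2\int_{r}^{\infty}\rho^{2}|u_k||\delb_r u_k|(t,\rho\omega)\,d\rho
\]
followed by Cauchy--Schwarz and integration on $S^2$ gives
\[
r^{2}|u_k(t,r\omega)|^{2}\lesssim\sum_{j\leq 1,\;|K|\leq 2}\|\delb_r^{j}\Omega^{K}u_k\|_{L^{2}(\Mcal^{\ME}_s)}^{2},
\]
the Euclidean volume element being comparable to $\rho^{2}d\rho\,d\sigma(\omega)$. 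I would then insert \eqref{eq9-16-mai-2025}--\eqref{eq10-16-mai-2025}, using $v_k$ as the intermediate scale exactly as in the proof of Lemma~\ref{prop1-16-mai-2025}: each time a rotation or radial derivative lands on a factor of $\zeta^{1/2+2\delta}w^{1/2}$, the homogeneity bounds $|\Omega_{ab}\zeta|=0$, $|\delb_r\zeta|\lesssim_{\delta}\zeta^{1-\delta}$, and $|Xw|\lesssim w$ (valid because $2+r-t\geq 1$ in $\Mcal^{\ME}_s$ and $0\leq\kappa\leq 1$) show that the loss is absorbed in passing from $u_k$ to $v_k$ and then to $|\Psi|_{\vec{n}}$. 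After summing, every term on the right is controlled by $\|(2+r-t)^{\kappa}\zeta^{1/2}|\widehat{\del}^{j}\widehat{\Omega}^{K}\Psi|_{\vec{n}}\|_{L^{2}(\Mcal^{\ME}_s)}$ with $j+|K|\leq 2$.

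Finally, since $2u_k\geq\zeta^{1/2+2\delta}w^{1/2}|\Psi|_{\vec{n}}$ pointwise and the weighted $L^{2}$ norms on the right-hand side are dominated by a $k$-independent quantity (from Lemma~\ref{lem1-15-mai-2025}), Lebesgue's dominated convergence theorem lets me send $k\to\infty$ and recover \eqref{eq12-16-mai-2025}. For the Euclidean statement \eqref{eq10-19-aout-2025}, the same scheme applies verbatim: in $\Mcal^{\Ecal}_s$ one has $\zeta\equiv 1$, so all $\zeta$-weights trivialize, and the sharper hypothesis \eqref{eq7-19-aout-2025} replaces \eqref{eq6-19-aout-2025}, removing the need for the regularized $\zeta$-power in $u_k$.

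\textbf{Main obstacle.} The delicate point is the bookkeeping of the bootstrap between $u_k$ and $v_k$: a second commutator $X_1X_2u_k$ must be expanded so that at most one derivative falls on the $\zeta$-factor and the remaining derivatives either produce $\widehat{X}\Psi$ through \eqref{eq11-16-mai-2025} or reduce to $v_k$, where the analogous bound \eqref{eq10-16-mai-2025} can be reapplied; any mismatch in the exponents of $\zeta$ would spoil the claimed power $\zeta^{1/2+2\delta}$ on the left-hand side of \eqref{eq12-16-mai-2025}. A secondary subtlety is that $\delb_r(2+r-t)^{2\kappa}=2\kappa(2+r-t)^{2\kappa-1}(1-\delb_rT)$ contains the factor $1-\delb_rT$, which is non-negative and bounded but can create borderline cases at $\kappa=1$; this is handled by noting that $2+r-t\geq 1$ and $|1-\delb_rT|\leq 1$ throughout $\Mcal^{\ME}_s$.
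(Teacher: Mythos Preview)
Your overall strategy—regularizing via $u_k,v_k$, applying a scalar Klainerman--Sobolev inequality, invoking Lemma~\ref{lem1-15-mai-2025} for the derivative bounds, and passing to the limit $k\to\infty$—matches the paper's approach. However, the specific scalar inequality you propose is too expensive in derivatives. Combining the angular embedding $H^2(S^2)\hookrightarrow L^\infty(S^2)$ (two rotations) with the one-dimensional radial identity $r^2|u_k|^2\leq 2\int_r^\infty\rho^2|u_k||\delb_r u_k|\,d\rho$ (one radial derivative) yields a bound over the index range $j\leq 1$, $|K|\leq 2$, hence $j+|K|\leq 3$. Since each scalar derivative on $u_k$ becomes one spinorial derivative on $\Psi$ through \eqref{eq9-16-mai-2025}--\eqref{eq10-16-mai-2025}, you would end up with $j+|K|\leq 3$ on the right of \eqref{eq12-16-mai-2025}, not the claimed $j+|K|\leq 2$. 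Your own displayed bound already shows this discrepancy, and the later claim that ``every term on the right is controlled by $\ldots$ with $j+|K|\leq 2$'' is not justified by what precedes it.

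The paper avoids this loss by going through an $L^6$ intermediate: it invokes Lemma~\ref{lem2-16-mai-2025}, which splits the Sobolev embedding into two one-derivative steps, $r^{2/3}\|\cdot\|_{L^6}\lesssim\sum_{j+|K|\leq 1}\|\delb_r^j\Omega^K\cdot\|_{L^2}$ and $r^{1/3}|\cdot|\lesssim\sum_{j+|K|\leq 1}\|\delb_r^j\Omega^K\cdot\|_{L^6}$, exactly as in the hyperboloidal proof (Proposition~\ref{prop1-16-mai-2025}). This yields the sharp two-derivative count. The paper also notes a point you mention only implicitly: since $\Mcal^{\ME}_s$ has infinite $\diff x$-measure, one must first establish \eqref{eq12-16-mai-2025} for compactly supported $\Psi$ (so that dominated convergence applies to the $k^{-2}$ terms) and then approximate a general spinor field by cut-offs.
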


\begin{proof}
The proof is quite similar to that of Proposition~\ref{prop1-16-mai-2025}. Here we rely on Lemma \ref{lem2-16-mai-2025}. The only difference is that $\Mcal^{\ME}_s$ is {\bf not} of finite measure with respect to $\diff x$. We thus need to proof \eqref{eq12-16-mai-2025} in the case where $\Psi$ is compactly supported. Then again we let $k\rightarrow \infty$ and apply the theorem of dominant convergence. Then we follow a procedure of approximation of a general spinor field by compactly-supported spinor fields, which can be constructed via suitable cut-off functions.
\end{proof}

\begin{proof}[Proof of Theorem~\ref{thm2-05-april-2025} in the merging-Euclidean domain] We recall that Lemma \ref{lem1-15-mai-2025}, under the decay assumptions \eqref{eq10-11-aout-2025} and \eqref{eq12-11-aout-2025}, guarantees \eqref{eq11-16-mai-2025}. Then Lemma~\ref{lem2-11-aout-2025} leads us to the desired estimate.
\end{proof}

}

%==================================================================================

\section{Pointwise estimate for massive Dirac fields on light-bending spacetimes} 
\label{section=N7}

\subsection{ The Cauchy adapted frame}
\label{section===10-1}

{ 

\paragraph{Aim of this section.} 

Our next purpose is to establish a pointwise estimate for massive Dirac fields, as stated in Proposition~\ref{prop2-14-aout-2025}, below. We work on a light-bending spacetime and seek a control of 
\(
s^{3/2}\!\big((s/t)\sum_{\alpha}|\widehat{\del_\alpha}\Psi|_{\vec n}+\lapsb M\,|\Psi|_{\vec n}\big)
\)
at a spacetime point \((t,x)\) by data transported along the orthogonal generators denoted by \(\gamma_{t,x}\) and we provide explicit gauge/curvature remainders. Precisely, the estimate \eqref{eq8-26-aout-2025} stated below bounds the value at \((t,x)\) by the \emph{initial ray value}
\(
\big(|\nabla_{\vec L}(s^{3/2}\Psi)|_{\vec n} + \zetab^{-1} M\,|s^{3/2}\Psi|_{\vec n}\big)
\)
evaluated at \(\gamma_{t,x}(s_{t,x}^*)\), plus the integral of the effective source \(|F|_{\vec n,t,x}\), and lower–order terms. Here \(F\) collects the inhomogeneity \(\opDirac\Phi- \mathrm{i}M\Phi\), the gauge contribution \(\nabla_W\Psi\), and the curvature/quasi-null remainders \(R[H,\Psi]\) satisfying the perturbative bound \eqref{eq8-22-aout-2025}. The proof proceeds by renormalizing and considering \(s^{3/2}\Psi\), deriving a transport inequality for \(\sum_{\alpha}|\widehat{\del_\alpha}u|_{\vec n}+\lapsb M\,|u|_{\vec n}\) along \(\vec L\), and exploiting the light-bending condition to obtain a favorable sign on the principal part. The commutator identities in generalized wave gauge reduce all error terms to deformation tensors, the prescribed vector \(W\), and curvature (\(H,R_g\)), yielding the terms \(F\) and \(R[H,\Psi]\), while an integration along \(\gamma_{t,x}\) and an integration argument under the smallness \(\eps_s\) then yields \eqref{eq8-26-aout-2025}, below, with constants depending only on \(M\).

%-----------------------------------------

\paragraph{A frame of interest.}

Let us introduce the following {\sl Cauchy adapted frame} (following the terminology in~\cite[Section 5.3]{YCB}): 
\begin{equation}\label{eq1-20-july-2025}
\grave{v}_0  = \vec{L} := \lapsb\vec{n} = \delb_0 - \betab^a\delb_a,\quad \grave{v}_a = \delb_a, 
\end{equation}
which can also be expressed in the natural frame within the hyperboloidal domain $\Mcal^{\Hcal}_{[s_0,s_1]}$:
\begin{equation}\label{eq2-20-july-2025}
\grave{v}_0 = \big((s/t) - \betab^a(x^a/t)\big) \del_t - \betab^a\del_a,
\quad 
\grave{v}_a = (x^a/t)\del_t + \del_a.
\end{equation}
We denote by $\grave{\Phi}_{\alpha}^{\beta}$ and $\grave{\Psi}_{\alpha}^{\beta}$ the transition matrices between $\{\grave{v}_{\alpha}\}$ and $\{\del_{\alpha}\}$, in the sense that
\begin{equation}\label{eq3-20-july-2025}
\grave{v}_{\alpha} = \grave{\Phi}_{\alpha}^{\beta}\del_{\beta},
\quad 
\del_{\alpha} = \grave{\Psi}_{\alpha}^{\beta}\grave{v}_{\beta}.
\end{equation}
The corresponding dual frames are then 
\begin{equation}\label{eq4-20-july-2025}
\aligned
\grave{\omega}^0 & =  \big((s/t) - (x^a/t)\betab^a\big)^{-1}\diff t 
- \big((s/t) - (x^a/t)\betab^a\big)^{-1}(x^a/t)\diff x^a,
\\
\grave{\omega}^a & =  \big((s/t) - (x^a/t)\betab^a\big)^{-1}\betab^a \diff t + \diff x^a.
\endaligned
\end{equation}
The spacetime metric can be expressed as
\bse
\begin{equation}\label{eq5-20-july-2025}
\grave{g}_{00} = - \lapsb^2,\quad \grave{g}_{0a} = 0,
\quad
\grave{g}_{ab} = \gb_{ab}, 
\end{equation}
and as 
\begin{equation}\label{eq6-20-july-2025}
\grave{g}^{00} = - \lapsb^{-2},\quad \grave{g}^{a0} = 0,\quad \grave{g}^{ab} = \sigmab^{ab}.
\end{equation}
\ese

Observe that the Cauchy adapted frame depends on the metric. In the flat case, it coincides with the tangent-orthogonal frame and, in this case,
\begin{equation}\label{eq7-20-july-2025}
\big(\grave{\Phi}_{\alpha}^{\beta}\big|_{\eta}\big)_{\beta\alpha}
=\left(
\begin{array}{cc}
t/s & x^a/t
\\
x^b/s & \mathrm{I}_3
\end{array}
\right),
\quad \qquad
\big(\grave{\Psi}_{\alpha}^{\beta}\big|_{\eta}\big)_{\beta\alpha}
=\left(
\begin{array}{cc}
t/s & -x^a/s
\\
-tx^b/s^2 & \delta^{ab} + (x^ax^b)/s^2
\end{array}
\right).
\end{equation}
We also define $\bar{\delta}^b := \betab^b- \betab_{\eta}^b$. In the curved spacetime, we have 
\begin{equation}\label{eq1-21-july-2025}
\aligned
\big(\grave{\Psi}_{\alpha}^{\beta}\big)_{\beta\alpha} 
& = 
\left(
\begin{array}{cc}
t/s & -(x^a/s)
\\
(t/s)\betab^b& \delta^{ab} - (x^a/s)\betab^b
\end{array}\right)
=
\big(\grave{\Psi}_{\alpha}^{\beta}\big)_{\beta\alpha}\Big|_{g=\eta} + 
\left(
\begin{array}{cc}
0 &0
\\
(t/s)\bar{\delta}^b & \quad - (x^a/s)\bar{\delta}^b
\end{array}\right)
\\
& =: \big(\grave{\Psi}_{\alpha}^{\beta}\big)_{\beta\alpha}\Big|_{g=\eta} 
+ \big(\grave{\Delta}_{\alpha}^{\beta}\big)_{\beta\alpha}.
\endaligned
\end{equation}
One of our tasks is to estimate $\grave{\Delta}_{\alpha}^{\beta}$, referred to as the \textbf{frame deformation} of the Cauchy adapted frame. 

%-------------------------------------------------

\paragraph{Preliminary estimates.}

As a preparation, we establish some  technical estimates, as follows. 

\begin{lemma}\label{lem1-31-july-2025}
Assume that the uniform spacelike condition \eqref{eq-USA-condition} holds with a sufficiently small $\eps_s$. Then one has
\begin{equation}\label{eq10-31-july-2025}
\aligned
\big|\del_{\alpha}\sigmab^{ab}\big|& \lesssim  \zetab^{-4}|\del H| + s^{-1}\zetab^{-3}|H| + \zetab^{-2},
\\
\big|\del_{\alpha}\big(\sigmab^{ab} - \sigmab_{\eta}^{ab}\big)\big|
& \lesssim  (s/t)^{-2}\zetab^{-2}\big(1+\zetab^{-2}|H|\big)|\del H| + (s/t)^{-2}\zetab^{-2}|H|
\endaligned
\end{equation}
and, moreover,  
\begin{equation}\label{eq11-31-july-2025}
\big|\delu_c\sigmab^{ab}\big|\lesssim t^{-1}\zetab^{-4}|L_cH| + s^{-1}(s/t)^{-1}|H| + s^{-1}(s/t)^{-1}\zetab^{-2}|H|^2,
\end{equation}
\begin{equation}\label{eq1-12-aout-2025}
\big|\delu_c\big(\sigmab^{ab} - \sigmab_{\eta}^{ab}\big)\big|
\lesssim s^{-1}(s/t)^{-1}\zetab^{-2}|H|_1 + s^{-1}(s/t)^{-1}\zetab^{-4}|H||H|_1.
\end{equation}
\end{lemma}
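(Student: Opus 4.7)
The plan is to start from the matrix-inverse identity
\[
\del_{\alpha}\sigmab^{ab} = -\sigmab^{ac}(\del_{\alpha}\gb_{cd})\sigmab^{db},
\]
which reduces everything to bounds on $\del_{\alpha}\gb_{cd}$ combined with the already-established inverse bound $|\sigmab^{ab}|\lesssim \zetab^{-2}$ from Claim~\ref{cla1-16-june-2025}. Splitting $\gb_{cd}=\bar{\eta}_{cd}+\Hb_{cd}$ according to~\eqref{eq10-03-aout-2025}, the curved piece satisfies $|\del_{\alpha}\Hb_{cd}|\lesssim |\del H|$ after using that the transition coefficients $(x^a/r)\delb_rT$ and $J$ are bounded; two factors of $\sigmab$ then yield the contribution $\zetab^{-4}|\del H|$. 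For the flat piece $\bar{\eta}_{cd}=\delta_{cd}-(x^cx^d/r^2)(\delb_rT)^2$, I would use $|\del_{\alpha}(x^cx^d/r^2)|\lesssim r^{-1}$ together with Lemmas~\ref{lem1-28-july-2025} and~\ref{lem2-05-aout-2025} (that is, $|\delb_r\delb_rT|\lesssim \zeta$ and $|\del_t\delb_rT|\lesssim \zeta^{-\delta}$) to obtain $|\del_\alpha\bar{\eta}_{cd}|\lesssim 1$, which after multiplication by two inverse-metric factors delivers the floor $\zetab^{-2}$. The cross term (one inverse on the flat side, one on the curved side) contributes $s^{-1}\zetab^{-3}|H|$ once we note that $|\sigmab^{ab}-\sigmab_{\eta}^{ab}|\lesssim \zeta^{-2}\zetab^{-2}|H|$ via Claim~\ref{clm1-12-aout-2025} and that $r\gtrsim s$ on the support of the non-trivial part of $\bar{\eta}_{cd}-\delta_{cd}$.

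For the difference estimate $\del_{\alpha}(\sigmab^{ab}-\sigmab_{\eta}^{ab})$, I would differentiate the explicit decomposition supplied by Proposition~\ref{prop1-27-june-2025},
\[
\sigmab^{ab}-\sigmab_{\eta}^{ab}=-\frac{|H^{\Ncal00}|}{\zeta^2(\zeta^2+|H^{\Ncal00}|)}\frac{x^ax^b}{r^2}+M_c^aQ^{cb},
\]
term by term. The rational factor in $H^{\Ncal00}$ is handled by the chain rule, its partial derivative in $H^{\Ncal00}$ being bounded by $\zeta^{-2}\zetab^{-2}$; the derivative of the homogeneous factor $(x^ax^b/r^2)$ contributes an $r^{-1}$, which combines with $(s/t)^{-2}\lesssim \zetab^{-2}$ in the hyperboloidal direction. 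For $M_c^aQ^{cb}$, the derivative falls either on $M_c^a$ (which is $O(\zetab^{-2}|H|)$, and whose derivative is controlled by re-applying the inverse-matrix identity, yielding the $(s/t)^{-2}\zetab^{-2}(1+\zetab^{-2}|H|)|\del H|$ contribution) or on the homogeneous factors comprising $Q^{cb}$, producing the remainder $(s/t)^{-2}\zetab^{-2}|H|$.

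The ``good''-derivative estimates~\eqref{eq11-31-july-2025} and~\eqref{eq1-12-aout-2025} follow the same scheme, but now one exploits that $\delu_c=t^{-1}L_c$ modulo a benign rescaling, so any purely \emph{radial} factor (in particular $\zeta$, $J$, $\delb_rT$, $r/t$ in the hyperboloidal region) is essentially annihilated by $\delu_c$; only the angular factor $x^a/t$ and the metric perturbation $H$ survive, producing the $t^{-1}\zetab^{-4}|L_cH|$ leading term, together with the $s^{-1}(s/t)^{-1}$ remainders coming from the derivative of $(x^ax^b/r^2)$. The main obstacle, which shapes all bookkeeping above, is tracking the precise power of $\zetab$ versus $\zeta$ in the transition region, where the cutoff $\xi$ only satisfies the fractional estimate of Lemma~\ref{lem1-05-aout-2025}; this is dealt with by systematically decomposing $\del_\alpha=\Psiu_\alpha^0\del_t+\Psiu_\alpha^b\delu_b$ and absorbing the $\del_t$ component into the estimate of Lemma~\ref{lem2-05-aout-2025}, so that the $\zeta^{-\delta}$ loss there is compensated by an additional factor of $J$ which eliminates it after multiplication by the inverse metric.
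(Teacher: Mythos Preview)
Your overall strategy via the inverse-matrix identity $\del_{\alpha}\sigmab^{ab}=-\sigmab^{ac}(\del_{\alpha}\gb_{cd})\sigmab^{db}$ matches the paper's, and the first line of \eqref{eq10-31-july-2025} goes through once you sharpen the claim $|\del_{\alpha}\bar{\eta}_{cd}|\lesssim 1$ to the correct $|\del_{\alpha}\bar{\eta}_{cd}|\lesssim t^{-1}$ and then use $t^{-1}\lesssim (s/t)^2\leq\zetab^2$ (as written, ``$\zetab^{-4}\times 1=\zetab^{-2}$'' is not an identity). Note also that the lemma lives entirely in $\Mcal^{\Hcal}_{[s_0,s_1]}$, where $J=\zeta=s/t$ and $\delb_rT=r/t$ are explicit; the cutoff-$\xi$ lemmas you invoke (Lemmas~\ref{lem1-28-july-2025}, \ref{lem2-05-aout-2025}, \ref{lem1-05-aout-2025}) concern the merging region and are irrelevant here. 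For the difference estimates, the paper bypasses Proposition~\ref{prop1-27-june-2025} and simply differentiates the exact matrix identity $\sigmab^{ab}-\sigmab_{\eta}^{ab}=-\sigmab_{\eta}^{ac}\Hb_{cd}\sigmab^{db}$; your route could be pushed through but is considerably longer.

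The genuine gap is in \eqref{eq11-31-july-2025}. Your heuristic ``radial factors are annihilated by $\delu_c$'' does not apply to the coefficients in $\Hu_{cd}$, which are angular ($x^c/t$, $x^cx^d/t^2$) and are \emph{not} killed by $\delu_e$: one has $\delu_e(x^d/t)=t^{-1}\bar\eta_{ed}\neq 0$. If you just bound these coefficient-derivative terms by $t^{-1}|H|$ and multiply by two factors of $\sigmab$, you get $t^{-1}\zetab^{-4}|H|$, which is \emph{not} controlled by the target $s^{-1}(s/t)^{-1}|H|$ (their ratio is $(s/t)^{-2}\zetab^{-4}\cdot (s/t)^2=\zetab^{-4}(s/t)^0$\,---\,too large near the cone). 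The paper's key observation is structural: writing the coefficient derivatives via $\delu_e(x^d/t)=t^{-1}\bar\eta_{ed}=t^{-1}(\gb_{ed}-\Hu_{ed})$, the resulting $\gb_{ed}$ factor contracts away one $\sigmab$ through $\gb_{ed}\sigmab^{db}=\delta_e^b$, so these ``$T_2$''-terms cost only $\zetab^{-2}$ instead of $\zetab^{-4}$; for instance
\[
\big|\sigmab^{ac}(x^c/t^2)\gb_{ed}H_{00}\sigmab^{db}\big|
=\big|t^{-1}(x^c/t)\sigmab^{ac}H_{00}\,\delta_e^b\big|\lesssim s^{-1}(s/t)^{-1}|H|.
\]
Without this cancellation your scheme cannot reach the stated power of $\zetab$ in \eqref{eq11-31-july-2025}, and since the paper then uses \eqref{eq11-31-july-2025} to prove \eqref{eq1-12-aout-2025} (see the term $\sigmab_{\eta}^{ac}\Hb_{cd}\,\delb_e\sigmab^{db}$ in \eqref{eq2-12-aout-2025}), the last estimate would also remain open.
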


\begin{proof} 
\bse
We begin by observing that 
\begin{equation}\label{eq12-31-july-2025}
\del_{\alpha}\sigmab^{ab} = - \sigmab^{ac}\del_{\alpha}\gb_{cd}\sigmab^{db},
\quad\qquad 
\delu_e\sigmab^{ab} = - \sigmab^{ac}\delu_e\gb_{cd}\sigmab^{db}
\end{equation}
and
\begin{equation}\label{eq4'-21-july-2025}
\sigmab^{ab} - \sigmab_{\eta}^{ab} = - \sigmab_{\eta}^{ac}\Hb_{cd}\sigmab^{db}.
\end{equation}
We also have
\begin{equation}
\big|\del_{\alpha}\Hb_{ab}\big|\lesssim t^{-1}|H| + |\del H|,
\quad
\big|\del_{\alpha}\gb_{ab}\big|\lesssim t^{-1} + t^{-1}|H| + |\del H|.
\end{equation}
Then the estimate on $\del_{\alpha}\sigmab^{ab}$ in \eqref{eq10-31-july-2025} follows.
By recalling \eqref{eq2-16-june-2025}, \eqref{eq1-17-mai-2025} and observing that  
\be
\del_{\alpha}(s/t)\lesssim s^{-1}\lesssim (s/t),\quad \text{ in }  \Mcal^{\Hcal}_{[s_0,s_1]},
\ee
the inequality \eqref{eq10-31-july-2025} is immediate.

For the derivation of \eqref{eq11-31-july-2025}, we have (in particular) 
\begin{equation}
\delu_c(x^d/t) = -(x^cx^d/t^3) + \delta_{cd}t^{-1} = t^{-1}\bar{\eta}_{cd}, 
\end{equation}
and therefore
\be
\aligned
\delu_e\Hu_{cd} 
& =  (x^cx^d/t^2)\delu_eH_{00} + (x^c/t)\delu_eH_{0d} + (x^d/t)\delu_eH_{c0} + \delu_eH_{cd}
\\
& \quad +(x^c/t^2)\bar{\eta}_{ed}H_{00} + (x^d/t^2)\bar{\eta}_{ec}H_{00} + t^{-1}\bar{\eta}_{ce}H_{0d} + t^{-1}\bar{\eta}_{ed}H_{c0}
\\
& = \Big((x^cx^d/t^2)\delu_eH_{00} + (x^c/t)\delu_eH_{0d} + (x^d/t)\delu_eH_{c0} + \delu_eH_{cd}\Big)
\\
& \quad +\Big((x^c/t^2)\gb_{ed}H_{00} + (x^d/t^2)\gb_{ec}H_{00} + t^{-1}\gb_{ce}H_{0d} + t^{-1}\gb_{ed}H_{c0}\Big)
\\
& \quad - \Big((x^c/t^2)\Hb_{ed}H_{00} + (x^d/t^2)\Hb_{ec}H_{00} + t^{-1}\Hb_{ce}H_{0d} + t^{-1}\Hb_{ed}H_{c0}\Big)
\\
& = : T_1+T_2+T_3.
\endaligned
\ee
Estimating the terms $T_1$ and $T_3$ is immediate, and we can focus on $T_2$. For instance, we have 
\be
\big|\sigmab^{ac}(x^c/t^2)\gb_{ed}H_{00}\sigmab^{db}\big| = \big|t^{-1}(x^c/t)\sigmab^{ac}H_{00}\delta_e^b\big|\lesssim s^{-1}(s/t)^{-1}|H|,
\ee
while the remaining terms in $T_2$ are bounded in a similar manner. We thus arrive at the estimate \eqref{eq11-31-july-2025}. 

For the derivation of the last estimate \eqref{eq1-12-aout-2025}, thanks to \eqref{eq4'-21-july-2025} we have 
\begin{equation}\label{eq2-12-aout-2025}
\delb_e\big(\sigmab^{ab} - \sigmab_{\eta}^{ab}\big) 
= - \delb_e\big(\sigmab_{\eta}^{ac}\Hb_{cd}\sigmab^{db}\big)
= - \delb_e(\sigmab_{\eta}^{ac})\Hb_{cd}\sigmab^{db} 
- \sigmab_{\eta}^{ac}\delb_e\Hb_{cd}\sigmab^{db} 
- \sigmab_{\eta}^{ac}\Hb_{cd}\delb_e\sigmab^{db}.
\end{equation}
Then we apply \eqref{eq11-31-july-2025} and the fact
\be
\big|\delb_c\sigmab_{\eta}^{ab}\big| = \big|t^{-1}L_c\sigmab_{\eta}^{ab}\big| \lesssim s^{-1}(s/t)^{-1}. 
\ee
In particular, we need to take care of the factor $s^{-1}(s/t)^{-3}|H|^2$ arising from the first term in the right-hand side of \eqref{eq2-12-aout-2025}. We claim that this factor can be absorbed by $s^{-1}(s/t)^{-1}\zetab^{-4}|H||H|_1$. This is indeed possible, since 
\begin{equation}\label{eq3-aout-2025}
\zetab^2 = (s/t)^2+|H^{\Ncal00}|\lesssim (s/t),\quad \text{ in }  \Mcal^{\Hcal}_{[s_0,s_1]}, 
\end{equation}
by virtue of the light-bending condition \eqref{eq2-14-june-2025}.
\ese
\end{proof}

%-------------------------------------------------

\paragraph{Controlling the frame deformation.}

\begin{lemma}\label{lem2-22-july-2025}
Assume that the uniform spacelike condition \eqref{eq-USA-condition} holds with a sufficiently small $\eps_s$. Then the frame deformation of the Cauchy adapted frame satisfies 
\begin{equation}\label{eq9'-21-july-2025}
\big|g^{\alpha\beta}\del_{\alpha}\big(\grave{\Delta}_{\beta}^c\big)\big|
\lesssim \zetab^{-2}|H|_1\sum_{k=0}^2\big((s/t)^{-2}|H|\big)^k.
\end{equation}
\end{lemma}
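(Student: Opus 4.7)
\smallskip

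The approach is a direct computation followed by careful bookkeeping of nonlinear error terms. From \eqref{eq1-21-july-2025}, the only nonzero components of $\grave{\Delta}$ are $\grave{\Delta}_0^{\,c}=(t/s)\bar{\delta}^{c}$ and $\grave{\Delta}_a^{\,c}=-(x^{a}/s)\bar{\delta}^{c}$ for $c\geq 1$, where $\bar{\delta}^{c}:=\betab^{c}-\betab_{\eta}^{c}$. Applying the Leibniz rule, we decompose
\begin{equation*}
g^{\alpha\beta}\del_{\alpha}\grave{\Delta}_\beta^{\,c}
= \bar{\delta}^{c}\,\Pi + \Xi^{\alpha}\,\del_{\alpha}\bar{\delta}^{c},
\qquad
\Pi := g^{\alpha 0}\del_{\alpha}(t/s) - g^{\alpha b}\del_{\alpha}(x^{b}/s),
\qquad
\Xi^{\alpha} := (t/s)\,g^{\alpha 0} - (x^{b}/s)\,g^{\alpha b}.
\end{equation*}
The plan is to estimate each block separately and then combine.

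The first block $\bar{\delta}^{c}\,\Pi$ is purely algebraic. Using Claim~\ref{clm-1-29-june-2025}, which provides the explicit decomposition
$\bar{\delta}^{c}=\tfrac{H^{\Ncal 00}}{\zeta^{2}-H^{\Ncal 00}}\,\betab_{\eta}^{c}+\Hb_{0a}\sigmab^{ac}+\bar{\eta}_{0a}M_{b}^{a}Q^{bc}$, together with the bounds of Claim~\ref{cla1-16-june-2025} and Proposition~\ref{prop1-27-june-2025} on $\sigmab$ and $M,Q$, yields $|\bar{\delta}^{c}|\lesssim J\,\zetab^{-2}|H|$ after invoking the smallness of $\eps_{s}$. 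A direct computation in the hyperboloidal domain gives $|\Pi|\lesssim s^{-1}\bigl(1+|H|\bigr)$; combined with the above and with $s^{-1}J\lesssim 1$, this yields a contribution controlled by $\zetab^{-2}|H|_{1}$. For the second block $\Xi^{\alpha}\del_{\alpha}\bar{\delta}^{c}$, I would note that $|\Xi^{\alpha}|\lesssim 1$ (it reduces to the flat combination $(t/s)\eta^{\alpha 0}-(x^{b}/s)\eta^{\alpha b}$ up to $H$-corrections), so the task reduces to bounding $|\del_\alpha\bar{\delta}^{c}|$ via the explicit formula above.

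Here is where the three-term structure on the right-hand side of \eqref{eq9'-21-july-2025} arises. Differentiating the decomposition of Claim~\ref{clm-1-29-june-2025}, each derivative either lands on $H$ itself (producing $|\del H|\lesssim t^{-1}|H|_{1}$ after reintroducing the boost fields via $\del=t^{-1}(L-(x/t)\del_t)$), or on one of the geometric factors $\sigmab^{ac}$, $Q^{bc}$, $(\zeta^{2}-H^{\Ncal 00})^{-1}$. The first two are estimated by \eqref{eq10-31-july-2025} of Lemma~\ref{lem1-31-july-2025}, while the last is expanded as a geometric series in $\zeta^{-2}H^{\Ncal 00}$; since $\zetab^{2}\simeq \zeta^{2}+|H^{\Ncal 00}|$ and $|H^{\Ncal 00}|\lesssim \zeta^{2}$ by the light-bending condition, each factor introduced by this series is bounded by $\zetab^{-2}|H|\lesssim (s/t)^{-2}|H|$. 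After smallness of $\eps_s$ truncates the expansion at cubic order, one recovers exactly the three terms corresponding to $k=0,1,2$ in the prescribed sum. The main obstacle is not conceptual but organizational: avoiding spurious $\zetab^{-3}$ losses requires tracking the Jacobian factor $J$ appearing in $|\bar{\delta}^{c}|$ and in $|\sigmab^{ac}\Hb_{0a}|$, and pairing it with the $s^{-1}$ from derivatives of $t/s$, $x^a/s$ so that the final bound is sharp.
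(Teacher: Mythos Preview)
Your proposal has a genuine gap at the decisive step. You assert $|\Xi^{\alpha}|\lesssim 1$, but in the flat case $\Xi^{0}=(t/s)\eta^{00}=-(t/s)$ and $\Xi^{a}=-(x^{a}/s)$, both of size $(s/t)^{-1}$; the $H$-corrections do not help. Estimating $|\Xi^{\alpha}\del_{\alpha}\bar\delta^{c}|$ by $|\Xi^{\alpha}|\cdot|\del_{\alpha}\bar\delta^{c}|$ therefore loses a full factor of $(s/t)^{-1}$ that none of your subsequent steps recovers. A related slip: your claim $|\del H|\lesssim t^{-1}|H|_{1}$ via $\del=t^{-1}(L-(x/t)\del_{t})$ controls only the tangential part; the $\del_{t}$ contribution survives with no gain, so this cannot compensate the missing $(s/t)$.

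What you are missing is a null-structure cancellation hidden in the \emph{vector field} $\Xi^{\alpha}\del_{\alpha}$ rather than in its componentwise size. In the flat case, using $\del_{t}+(x^{a}/t)\del_{a}=(s/t)^{2}\del_{t}+(x^{a}/t)\delu_{a}$, one finds
\[
\Xi^{\alpha}\del_{\alpha}=-(t/s)\del_{t}-(x^{a}/s)\del_{a}=-(s/t)\del_{t}-(x^{a}/s)\delu_{a},
\]
so the bad $\del_{t}$ carries an $(s/t)$ gain and the remainder is tangential. The paper exploits this by first absorbing the factor $(t/s)$ into $\bar\delta^{c}$, writing $(t/s)\bar\delta^{c}=\Hu_{0a}\sigmab^{ac}-(x^{a}/t)(\sigmab^{ac}-\sigmab_{\eta}^{ac})=:F^{c}$ so that $\grave\Delta_{\beta}^{\,c}=-\eta_{\gamma\beta}(x^{\gamma}/t)F^{c}$, and then decomposing $-\eta_{\gamma\beta}g^{\alpha\beta}(x^{\gamma}/t)\del_{\alpha}=-(s/t)^{2}\del_{t}-(x^{a}/t)\delu_{a}+O(H)\del$. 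The tangential piece is handled by the specialized estimates \eqref{eq11-31-july-2025} and \eqref{eq1-12-aout-2025} of Lemma~\ref{lem1-31-july-2025} on $\delu_{c}\sigmab^{ab}$ and $\delu_{c}(\sigmab^{ab}-\sigmab_{\eta}^{ab})$; the $(s/t)^{2}\del_{t}$ piece uses \eqref{eq10-31-july-2025}; and the $O(H)\del$ correction gives the higher-$k$ terms. The three-term structure $\sum_{k=0}^{2}((s/t)^{-2}|H|)^{k}$ arises from these layered nonlinear estimates, not from truncating a geometric series in $\zeta^{-2}H^{\Ncal00}$.

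A secondary issue: your bound $|\bar\delta^{c}|\lesssim J\,\zetab^{-2}|H|$ is too optimistic for the term $\tfrac{H^{\Ncal00}}{\zeta^{2}-H^{\Ncal00}}\betab_{\eta}^{c}$ in Claim~\ref{clm-1-29-june-2025}, since $|\betab_{\eta}^{c}|\lesssim J\zeta^{-2}$ (with $\zeta$, not $\zetab$); the correct bound carries an additional $(s/t)^{-2}$. This does not break your $\bar\delta^{c}\Pi$ block (the $s^{-1}$ in $\Pi$ absorbs it via $t/s^{2}\lesssim 1$), but it compounds the deficit in the $\Xi^{\alpha}\del_{\alpha}\bar\delta^{c}$ block once the incorrect $|\Xi^{\alpha}|$ bound is removed.
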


\begin{proof}
\bse
We first write 
\be
\aligned
(t/s)\bar{\delta}^b 
= (t/s)(\betab^b- \betab_{\eta}^b) 
& = (t/s)\Hb_{0a}\sigmab^{ab} + (t/s)\bar{\eta}_{0a}(\sigmab^{ab} - \sigmab_{\eta}^{ab})
\\
& = \Hu_{0a}\sigmab^{ab} - (x^a/t)(\sigmab^{ab} - \sigmab_{\eta}^{ab}).
\endaligned
\ee
Recall that, for the Euclidean metric, $\eta_{00} = -1, \eta_{aa} =1$, so 
\be
\grave{\Delta}_{\beta}^c = - \eta_{\gamma\beta}(x^{\gamma}/t)\big(\Hu_{0a}\sigmab^{ac} - (x^a/t)(\sigmab^{ac}- \sigmab_{\eta}^{ac})\big).
\ee
Thus we find
\be
\aligned
g^{\alpha\beta}\del_{\alpha}(\grave{\Delta}_{\beta}^c) 
& = g^{\alpha\beta}\del_{\alpha}\Big(- \eta_{\gamma\beta}(x^{\gamma}/t)\big(\Hu_{0a}\sigmab^{ac} - (x^a/t)(\sigmab^{ac}- \sigmab_{\eta}^{ac})\big)\Big)
\\
& = - \eta_{\gamma\beta}g^{\alpha\beta}(x^{\gamma}/t)\del_{\alpha}\big(\Hu_{0a}\sigmab^{ac} - (x^a/t)(\sigmab^{ac}- \sigmab_{\eta}^{ac})\big)
\\
& \quad - \eta_{\gamma\beta}g^{\alpha\beta}\big(\Hu_{0a}\sigmab^{ac} - (x^a/t)(\sigmab^{ac}- \sigmab_{\eta}^{ac})\big)\del_{\alpha}(x^{\gamma}/t)
=: T_1+T_2.
\endaligned
\ee
The term $T_2$ is easily handled, due to the decaying factor $|\del_{\alpha}(x^{\gamma}/t)|\lesssim t^{-1}$: 
\begin{equation}
|T_2|\lesssim s^{-1}(s/t)^{-1}\zetab^{-2}|H|.
\end{equation}
For the term $T_1$, we observe that
\be
\aligned
& \quad - \eta_{\gamma\beta}g^{\alpha\beta}(x^{\gamma}/t)\del_{\alpha}\big(\Hu_{0a}\sigmab^{ac} - (x^a/t)(\sigmab^{ac}- \sigmab_{\eta}^{ac})\big)
\\
& = -(x^{\alpha}/t)\del_{\alpha}\big(\Hu_{0a}\sigmab^{ac} - (x^a/t)(\sigmab^{ac}- \sigmab_{\eta}^{ac})\big)
- \eta_{\gamma\beta}H^{\alpha\beta}\del_{\alpha}\big(\Hu_{0a}\sigmab^{ac} - (x^a/t)(\sigmab^{ac}- \sigmab_{\eta}^{ac})\big)
\\
& = -(x^{\alpha}/t)(s/t)^2\del_t\big(\Hu_{0a}\sigmab^{ac} - (x^a/t)(\sigmab^{ac}- \sigmab_{\eta}^{ac})\big)
\\
& \quad - (x^{\alpha}/t)(x^a/t)\delu_a\big(\Hu_{0a}\sigmab^{ac} - (x^a/t)(\sigmab^{ac}- \sigmab_{\eta}^{ac})\big)
- \eta_{\gamma\beta}H^{\alpha\beta}\del_{\alpha}\big(\Hu_{0a}\sigmab^{ac} - (x^a/t)(\sigmab^{ac}- \sigmab_{\eta}^{ac})\big)
\\
& =:T_{11} + T_{12} + T_{13}.
\endaligned
\ee
In this calculation we applied the identity 
\be
(x^\alpha/t)\del_{\alpha} = (s/t)^2\del_t + (x^a/t)\delb_a = (s/t)^2\del_t + (x^a/t^2)L_a.
\ee
Then according to Lemma~\ref{lem1-31-july-2025}, we have 
\begin{equation}
\aligned
|T_{11}|& \lesssim  \zetab^{-2}(|H| + |\del H|) + \zetab^{-4}|H||\del H|,
\\
|T_{13}|& \lesssim  (s/t)^{-2}\zetab^{-2}(|H|^2 + |H||\del H|) + (s/t)^{-2}\zetab^{-4}|H|^2|\del H|.
\endaligned
\end{equation}
Finally, we treat $T_{12}$ by relying on \eqref{eq11-31-july-2025}: 
\begin{equation}
\big|T_{12}\big|
\lesssim s^{-1}(s/t)^{-1}\zetab^{-2}|H|_1 + s^{-1}(s/t)^{-1}\zetab^{-4}|H||H|_1.
\end{equation}
\ese
\end{proof}

}

%-------------------------------------------------------------------------------------------------------------------

\subsection{ Decomposition of the spinorial D'Alembert operator}
\label{section===10-2}

{ 

Next, we aim at deriving a ``line decomposition'' of the D'Alembert operator applied to spinors, as stated in in Proposition~\ref{prop2-23-july-2025}, below. We start from the identity 
\begin{equation}\label{eq6-29-july-2025}
\Box_g\Psi = g^{\alpha\beta}\nabla_{\alpha}(\nabla_{\beta}\Psi) - \nabla_W\Psi
\end{equation}
where $W := g^{\alpha\beta}\nabla_{\alpha}\del_{\beta}$ denotes the vector associated with the generalized wave gauge condition. We work in the Cauchy adapted frame and, in view of the transition relations \eqref{eq3-20-july-2025} and \eqref{eq6-20-july-2025}, we find 
\begin{equation}\label{eq11-22-july-2025}
\Box_g\Psi = - \lapsb^{-2}\nabla_{\vec{L}}(\nabla_{\vec{L}}\Psi) 
+ \sigmab^{ab}\nabla_{\delb_a}(\nabla_{\delb_b}\Psi)
+ g^{\alpha\beta}\del_{\alpha}\big(\grave{\Psi}_{\beta}^{\gamma}\big)\nabla_{\grave{v}_{\gamma}}\Psi
- \nabla_{W}\Psi.
\end{equation}
We treat first the third term in the above expression.

\begin{lemma}
\label{lem4-23-july-2025}
Assume that the uniform spacelike condition \eqref{eq-USA-condition} holds for a sufficiently small $\eps_s$. Then
one has 
\begin{equation}
g^{\alpha\beta}\del_{\alpha}\big(\grave{\Psi}_{\beta}^{\gamma}\big)\nabla_{\grave{v}_{\gamma}}\Psi
= -3s^{-1}\nabla_{\vec{L}}\Psi + R_3[H,\Psi]
\end{equation}
with
\begin{equation}\label{eq13-31-july-2025}
\aligned
|R_3[H,\Psi]|_{\vec{n}}
& \lesssim s^{-1}t^{-1}\zetab^{-1}[\Psi]_{1,1}
\\
& \quad + t^{-1}\zetab^{-2}|H|_1[\Psi]_{1,1}\sum_{k=0}^2\big((s/t)^{-2}|H|\big)^k
+ t^{-1}\zetab^{-3}|\Psi|_{\vec{n}}|H|_1\sum_{k=0}^2\big((s/t)^{-2}|H|\big)^k
\\
& \quad + \zetab^{-3}|\Psi|_{\vec{n}}|\del H||H|_1|\sum_{k=0}^2\big((s/t)^{-2}|H|\big)^k
+ s^{-2}(s/t)^{-2}\zetab^{-1}|\del H||H||\Psi|_{\vec{n}}.
\endaligned
\end{equation}
\end{lemma}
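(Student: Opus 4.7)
The strategy is to split the expression into a purely Minkowskian piece and perturbation pieces, showing that the flat contribution provides the leading $-3s^{-1}\nabla_{\vec L}\Psi$ term, while everything else is absorbed into the remainder $R_3[H,\Psi]$. Concretely, I would write $\grave{\Psi}_{\alpha}^{\beta}=\grave{\Psi}_{\alpha}^{\beta}|_{\eta}+\grave{\Delta}_{\alpha}^{\beta}$ via \eqref{eq1-21-july-2025} and $g^{\alpha\beta}=\eta^{\alpha\beta}+H^{\alpha\beta}$, giving the decomposition
\begin{equation*}
g^{\alpha\beta}\del_{\alpha}\big(\grave{\Psi}_{\beta}^{\gamma}\big)\nabla_{\grave{v}_{\gamma}}\Psi
=\eta^{\alpha\beta}\del_{\alpha}\big(\grave{\Psi}_{\beta}^{\gamma}|_{\eta}\big)\nabla_{\grave{v}_{\gamma}}\Psi
+\eta^{\alpha\beta}\del_{\alpha}\big(\grave{\Delta}_{\beta}^{\gamma}\big)\nabla_{\grave{v}_{\gamma}}\Psi
+H^{\alpha\beta}\del_{\alpha}\big(\grave{\Psi}_{\beta}^{\gamma}\big)\nabla_{\grave{v}_{\gamma}}\Psi.
\end{equation*}

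The flat piece is first computed explicitly from \eqref{eq7-20-july-2025}. Using $\del_t s=t/s$, $\del_a s=-x^a/s$, a direct calculation yields
\begin{equation*}
\eta^{\alpha\beta}\del_{\alpha}\big(\grave{\Psi}_{\beta}^{0}|_{\eta}\big)=-\tfrac{3}{s},\qquad
\eta^{\alpha\beta}\del_{\alpha}\big(\grave{\Psi}_{\beta}^{c}|_{\eta}\big)=\tfrac{3x^c}{s^2},
\end{equation*}
where the cancellation in the first identity rests on the relation $t^2-r^2=s^2$. This delivers the principal contribution $-3s^{-1}\nabla_{\vec L|_{\eta}}\Psi+3s^{-2}x^c\nabla_{\delb_c}\Psi$. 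I would then replace $\vec L|_\eta$ by $\vec L=\lapsb\vec n$ using the identity $\vec L-\vec L|_\eta=-\bar{\delta}^a\delb_a$ (which follows from $\vec L=\delb_0-\betab^a\delb_a$ and $\betab^a=\betab_\eta^a+\bar{\delta}^a$), the resulting error being of the form $3s^{-1}\bar{\delta}^a\nabla_{\delb_a}\Psi$, bounded by $\zetab^{-2}|H|[\Psi]_{1,1}$ thanks to Claim~\ref{clm-1-29-june-2025} and the tangential bound $|\nabla_{\delb_a}\Psi|_{\vec n}\lesssim t^{-1}[\Psi]_{1,1}+\zetab^{-1}|\del H||\Psi|_{\vec n}$ (the latter coming from the Clifford correction in \eqref{equa-511m} and \eqref{eq1-31-july-2025}). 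The residual tangential term $3s^{-2}x^c\nabla_{\delb_c}\Psi$ is handled identically, using $|x^c|/s^2\lesssim r/(s^2 t)\cdot t\lesssim s^{-1}t^{-1}\zetab^{-1}$ in the hyperboloidal region, producing the first line of \eqref{eq13-31-july-2025}.

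For the deformation term, I would apply Lemma~\ref{lem2-22-july-2025} directly together with $|\nabla_{\grave{v}_{\gamma}}\Psi|_{\vec n}\lesssim [\Psi]_{1,1}+\zetab^{-1}|\del H||\Psi|_{\vec n}$ (using Lemma~\ref{lem2-13-july-2025} to pass from $\grave v_\gamma$ to coordinate derivatives, noting $\lapsb\lesssim\lapsb_\eta\lesssim 1$ in $\Mcal^\Hcal$ so that $|\grave{v}_{\gamma}|_{\vec n}\lesssim 1$). This yields the contribution
\begin{equation*}
\big|\eta^{\alpha\beta}\del_{\alpha}(\grave{\Delta}_{\beta}^{\gamma})\nabla_{\grave{v}_{\gamma}}\Psi\big|_{\vec n}
\lesssim \zetab^{-2}|H|_1\,[\Psi]_{1,1}\!\sum_{k=0}^{2}\big((s/t)^{-2}|H|\big)^k
+\zetab^{-3}|H|_1|\del H|\,|\Psi|_{\vec n}\!\sum_{k=0}^{2}\big((s/t)^{-2}|H|\big)^k,
\end{equation*}
which is bookkeeping-compatible with the second, third and fourth lines of \eqref{eq13-31-july-2025}. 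For the last piece $H^{\alpha\beta}\del_{\alpha}(\grave{\Psi}_{\beta}^{\gamma})\nabla_{\grave{v}_{\gamma}}\Psi$, I would bound the derivatives of $\grave{\Psi}_\beta^\gamma$ by $s^{-1}(s/t)^{-1}$ in the flat part and by $s^{-1}(s/t)^{-1}\zetab^{-2}|H|_1$ for the $\grave\Delta$ part (by combining Lemma~\ref{lem1-31-july-2025} with \eqref{eq1-21-july-2025}), which gives the last line of \eqref{eq13-31-july-2025}.

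The main obstacle is the bookkeeping: tracking the precise powers of $(s/t)^{-1}$ and $\zetab^{-1}$ through repeated inversions (of $\gb_{ab}$ and of the full metric), so that all error terms fit into the five families in \eqref{eq13-31-july-2025} without spurious loss. The geometric series $\sum_{k=0}^{2}((s/t)^{-2}|H|)^k$ encodes the successive expansions of $\sigmab^{ab}-\sigmab_\eta^{ab}$ and of the frame deformation via \eqref{eq4'-21-july-2025}, so the key is to truncate these Neumann-type expansions at the right order, using the light-bending consequence $(s/t)^{-2}|H|\lesssim 1$ to close the estimate under the smallness assumption~\eqref{eq-USA-condition}.
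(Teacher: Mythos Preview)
Your overall strategy matches the paper's: split via $\grave{\Psi}=\grave{\Psi}|_\eta+\grave{\Delta}$ and $g=\eta+H$, extract $-3s^{-1}\nabla_{\vec L}\Psi$ from the flat piece, and bound the rest using Lemma~\ref{lem2-22-july-2025} and the explicit $\grave{\Psi}|_\eta$ derivatives. Your direct computation $\eta^{\alpha\beta}\del_\alpha(\grave{\Psi}_\beta^0|_\eta)=-3/s$, $\eta^{\alpha\beta}\del_\alpha(\grave{\Psi}_\beta^c|_\eta)=3x^c/s^2$ is correct. However, two bookkeeping steps fail as written.

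First, the residual tangential term $3s^{-2}x^c\nabla_{\delb_c}\Psi$: your claim that $|x^c|/s^2\lesssim s^{-1}t^{-1}\zetab^{-1}$ is false in general, since it is equivalent to $\zetab\lesssim s/t$, whereas $\zetab=\sqrt{(s/t)^2+|H^{\Ncal00}|}\geq s/t$ always (and $|H^{\Ncal00}|$ may be as large as $\eps_s(s/t)$, giving $\zetab\sim(s/t)^{1/2}\gg s/t$ near the cone). The paper circumvents this by writing the flat piece as $-3s^{-1}\nabla_{\delb_0}\Psi=-3s^{-1}\nabla_{\vec L}\Psi-3s^{-1}\betab^c\nabla_{\delb_c}\Psi$ and invoking the \emph{curved} bound $|\betab^c|\lesssim\zetab^{-1}$ from \eqref{eq8-31-july-2025}; equivalently, you should decompose $x^c/s=-\betab_\eta^c=-\betab^c+\bar\delta^c$ and use $|\betab^c|\lesssim\zetab^{-1}$, $|\bar\delta^c|\lesssim\zetab^{-2}|H|$. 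The direct bound $|x^c|/s\leq t/s$ is too crude.

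Second, for the deformation term you bound $|\nabla_{\grave v_\gamma}\Psi|_{\vec n}\lesssim[\Psi]_{1,1}+\cdots$ uniformly in $\gamma$, which loses a factor of $t^{-1}$. The point you miss is that $\grave\Delta_\beta^0=0$ (immediate from \eqref{eq1-21-july-2025}), so only $\gamma\in\{1,2,3\}$ contributes, and then $\nabla_{\grave v_a}\Psi=\nabla_{\delb_a}\Psi=t^{-1}\nabla_{L_a}\Psi$ supplies the extra $t^{-1}$. The paper makes this explicit by writing the deformation piece as $t^{-1}g^{\alpha\beta}\del_\alpha(\grave\Delta_\beta^a)\nabla_{L_a}\Psi$. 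Without this $t^{-1}$, your stated bound $\zetab^{-2}|H|_1[\Psi]_{1,1}\sum_k\cdots$ does not fit any line of \eqref{eq13-31-july-2025}: every term there carries at least a $t^{-1}$, $s^{-2}$, or $|\del H|$ decay factor.

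Both gaps are repairable along the lines above, and once fixed your argument coincides with the paper's. (Minor points: your $\vec L|_\eta$ should read $\vec L$ since you contract with the curved $\grave v_\gamma$, making the subsequent ``replace $\vec L|_\eta$ by $\vec L$'' step redundant; and Lemma~\ref{lem2-22-july-2025} bounds $g^{\alpha\beta}\del_\alpha(\grave\Delta_\beta^c)$ rather than $\eta^{\alpha\beta}\del_\alpha(\grave\Delta_\beta^c)$, though the difference is harmless higher order.)
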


\begin{proof}
\bse
When the metric is flat, i.e., $g = \eta$, thanks to \eqref{eq7-20-july-2025},\footnote{Equivalently, we can recall (from \cite[3.10]{PLF-YM-one}) 
$
- \Box = \delb_0\delb_0 - \sum_a\delb_a\delb_a + 2\sum_a\frac{x^a}{s}\delb_0\delb_a + \frac{3}{s}\del_0$, 
which leads us to
$
- \Box = \big(\delb_0 + (x^a/s)\delb_a\big)^2 - \big(\delta^{ab} + (x^ax^b/s^2)\big)\delb_a\delb_b + (3/s)\delb_0.
$
}
\begin{equation}
\aligned
\eta^{\alpha\beta}\del_{\alpha}\big(\grave{\Psi}_{\beta}^{\gamma}\big|_{g=\eta}\big)\nabla_{\grave{v}_{\gamma}}\Psi
& =  -3s^{-1}\nabla_{\delb_s}\Psi = -3s^{-1}\nabla_{\vec{L}}\Psi -3s^{-1}\betab^c\nabla_{\delb_c}\Psi.
\endaligned
\end{equation}
When in the curved case, we recall \eqref{eq1-21-july-2025} and write 
\be
\aligned
g^{\alpha\beta}\del_{\alpha}\big(\grave{\Psi}_{\beta}^{\gamma}\big)\nabla_{\grave{v}_{\gamma}}\Psi
&  =  
\eta^{\alpha\beta}\del_{\alpha}\big(\grave{\Psi}_{\beta}^{\gamma}\big|_{\eta}\big)\nabla_{\grave{v}_{\gamma}}\Psi
+H^{\alpha\beta}\del_{\alpha}\big(\grave{\Psi}_{\beta}^{\gamma}\big|_{\eta}\big)\nabla_{\grave{v}_{\gamma}}\Psi
+ g^{\alpha\beta}\del_{\alpha}\big(\grave{\Delta}_{\beta}^{\gamma}\big)\nabla_{\grave{v}_{\gamma}}\Psi
\\
& =:  -3s^{-1}\nabla_{\delb_0}\Psi -3\betab^c\nabla_{\delb_c}\Psi + R_4[H,\Psi], 
\endaligned
\ee
where, recalling that $\grave{\Delta}_{\beta}^0 = 0$,
\be
\aligned
R_4[H,\Psi] 
& = 
H^{\alpha\beta}\del_{\alpha}\big(\grave{\Psi}_{\beta}^{\gamma}\big|_{\eta}\big)\nabla_{\grave{v}_{\gamma}}\Psi
+ t^{-1}g^{\alpha\beta}\del_{\alpha}\big(\grave{\Delta}_{\beta}^a\big)\nabla_{L_a}\Psi
\\
& =  t^{-1}g^{\alpha\beta}\del_{\alpha}\big(\grave{\Delta}_{\beta}^a\big)\widehat{L_a}\Psi 
+H^{\alpha\beta}\del_{\alpha}\big(\grave{\Psi}_{\beta}^{\gamma}\big|_{\eta}\big)\nabla_{\grave{v}_{\gamma}}\Psi
+
\frac{1}{4t}g^{\alpha\beta}\del_{\alpha}\big(\grave{\Delta}_{\beta}^a\big)g^{\mu\nu}\del_{\mu}\cdot\nabla_{\nu}L_a\cdot\Psi
\\
& =:  T_1 + T_2 + T_3.
\endaligned
\ee
The estimate of $T_1$ is based on Lemma~\ref{lem2-22-july-2025}. For $T_3$ term, we apply \eqref{eq2-31-july-2025}. Finally, for $T_2$, we observe that
\be
\aligned
T_2 
& = H^{\alpha\beta}\del_{\alpha}\big(\grave{\Psi}_{\beta}^0|_{\eta}\big)\nabla_{\vec{L}}\Psi
+ H^{\alpha\beta}\del_{\alpha}\big(\grave{\Psi}_{\beta}^a|_{\eta}\big)\nabla_{\delb_a}\Psi
\\
& =   H^{\alpha\beta}\del_{\alpha}\big(\grave{\Psi}_{\beta}^0|_{\eta}\big)(s/t)\nabla_t\Psi
-H^{\alpha\beta}\del_{\alpha}\big(\grave{\Psi}_{\beta}^0|_{\eta}\big)\betab^a\nabla_{\delb_a}\Psi
+ H^{\alpha\beta}\del_{\alpha}\big(\grave{\Psi}_{\beta}^a|_{\eta}\big)\nabla_{\delb_a}\Psi
\\
& = (s/t)H^{\alpha\beta}\del_{\alpha}\big(\grave{\Psi}_{\beta}^0|_{\eta}\big)\widehat{\del_t}\Psi
+ t^{-1}H^{\alpha\beta}\big(\del_{\alpha}\big(\grave{\Psi}_{\beta}^a|_{\eta}\big)- \del_{\alpha}\big(\grave{\Psi}_{\beta}^0|_{\eta}\big)\betab^a\big)\widehat{L_a}\Psi
\\
& \quad +\frac{1}{4}(s/t)H^{\alpha\beta}\del_{\alpha}\big(\grave{\Psi}_{\beta}^0|_{\eta}\big)g^{\mu\nu}\del_{\mu}\cdot\nabla_{\nu}\del_t\cdot\Psi
+\frac{1}{4t}H^{\alpha\beta}\big(\del_{\alpha}\big(\grave{\Psi}_{\beta}^a|_{\eta}\big)- \del_{\alpha}\big(\grave{\Psi}_{\beta}^0|_{\eta}\big)\betab^a\big)g^{\mu\nu}\del_{\mu}\cdot\nabla_{\nu}L_a\cdot\Psi.
\endaligned
\ee
At this juncture, we recall that
\begin{equation}
\big|\del_{\alpha}\big(\grave{\Psi}_{\beta}^0|_{\eta}\big)\big|\lesssim (s/t)^{-2}s^{-1},
\quad
\big|\del_{\alpha}\big(\grave{\Psi}_{\beta}^a|_{\eta}\big)\big|\lesssim (s/t)^{-3}s^{-1}
\end{equation}
and, thanks to \eqref{eq8-31-july-2025},
\begin{equation}
|T_2|_{\vec{n}}\lesssim s^{-1}(s/t)^{-1}\zetab^{-1}|H|[\Psi]_{1,1}
+ s^{-1}(s/t)^{-1}\zetab^{-1}|H||\del H||\Psi|_{\vec{n}},
\end{equation}
which is dealt with by the right-hand side of \eqref{eq13-31-july-2025}.

On the other hand, we also have 
\be
s^{-1}\betab^c\nabla_{\delb_c}\Psi = s^{-1}t^{-1}\betab^c\nabla_{L_c}\Psi 
= s^{-1}t^{-1}\widehat{L_c}\Psi 
+ \frac{1}{4st}g^{\mu\nu}\del_{\mu}\cdot\nabla_{\nu}L_a\cdot\Psi
\ee
therefore, thanks to \eqref{eq2-31-july-2025}, 
\be
\big|s^{-1}\betab^c\nabla_{\delb_c}\Psi\big|_{\vec{n}}
\lesssim s^{-1}t^{-1}\zetab^{-1}[\Psi]_{1,1} + s^{-1}\zetab^{-1}|H||\Psi|_{\vec{n}}.
\ee
It remains to observe that the last term in the right-hand side of the above estimate can be absorbed by the third term in the right-hand side of \eqref{eq13-31-july-2025}. Indeed, this holds since, in $\Mcal^{\Hcal}_{[s_0,s_1]}$, 
\be
s^{-1}\zetab^{-1}\lesssim t^{-1}\zetab^{-3} \Leftrightarrow \zeta^2\lesssim s/t\Leftarrow |H^{\Ncal00}|\lesssim (s/t). \qedhere 
\ee
\ese
\end{proof}

For the second term in the right-hand side of \eqref{eq11-22-july-2025}, we establish the following estimate.

\begin{lemma}
\label{lem3-23-july-2025}
Assume that the uniform spacelike condition \eqref{eq-USA-condition} holds with a sufficiently small $\eps_s$. Then
one has 
\begin{equation}
\aligned
\big|\sigmab^{ab}\nabla_{\delb_a}\big(\nabla_{\delb_b}\Psi\big)\big|_{\vec{n}}
& \lesssim t^{-2}\zetab^{-2}\big([\Psi]_{2,2} + \zetab^{-1}[\Psi]_{1,1}\big)
\\
& \quad + t^{-1}\zetab^{-3}|\del H|[\Psi]_{1,1}
+ t^{-1}\zetab^{-3}|\del H|_1|\Psi|_{\vec{n}}
+ \zetab^{-3}|\del H|^2|\Psi|_{\vec{n}}.
\endaligned
\end{equation}
\end{lemma}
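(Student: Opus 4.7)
The plan is to exploit the identity $t\,\delb_a = L_a$ valid in the hyperboloidal region (since $\delb_a = (x^a/t)\del_t+\del_a$) to reduce the second-order covariant derivative along the tangent frame to a second-order expression in the admissible boost fields, for which the adapted derivative $\widehat{L_a}$ and the high-order norm $[\,\cdot\,]_{p,k}$ are tailored. Concretely, using \eqref{eq3-04-july-2025} twice, I would write
\[
\nabla_{L_b}\Psi = \widehat{L_b}\Psi + \tfrac{1}{4} g^{\mu\nu}\del_{\mu}\!\cdot\!\nabla_{\nu} L_b\!\cdot\!\Psi,
\]
substitute this into $\nabla_{\delb_a}(\nabla_{\delb_b}\Psi) = t^{-1}\nabla_{L_a}(t^{-1}\nabla_{L_b}\Psi)$, and again replace the outer $\nabla_{L_a}$ by $\widehat{L_a}$ plus its Clifford correction, using Lemma~\ref{lem1-24-feb-2025} to commute the factor $\del_{\mu}\cdot$ through the connection. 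After contracting with $\sigmab^{ab}$, the only weight loss comes from $|\sigmab^{ab}|\lesssim \zetab^{-2}$ (Claim~\ref{cla1-16-june-2025}).

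Schematically, the resulting decomposition reads
\[
\sigmab^{ab}\nabla_{\delb_a}(\nabla_{\delb_b}\Psi) = t^{-2}\sigmab^{ab}\widehat{L_a}\widehat{L_b}\Psi + \mathrm{I} + \mathrm{II} + \mathrm{III} + \mathrm{IV},
\]
where $\mathrm{I}$ collects the two ``mixed'' terms of the form $t^{-2}\sigmab^{ab}(g^{\mu\nu}\del_\mu\!\cdot\!\nabla_\nu L_{(a}\!\cdot\!\widehat{L_{b)}}\Psi)$, term $\mathrm{II}$ is the quadratic Clifford product $t^{-2}\sigmab^{ab}(g^{\mu\nu}\del_\mu\!\cdot\!\nabla_\nu L_a)\!\cdot\!(g^{\mu'\nu'}\del_{\mu'}\!\cdot\!\nabla_{\nu'} L_b)\!\cdot\!\Psi$, term $\mathrm{III}$ gathers the one-derivative error $t^{-2}\sigmab^{ab}\nabla_{L_a}(g^{\mu\nu}\del_\mu\!\cdot\!\nabla_\nu L_b)\!\cdot\!\Psi$, and $\mathrm{IV}$ comes from the Leibniz rule acting on the prefactor $t^{-1}$ (which produces $\delb_a(t^{-1})=-x^a/t^3$ times $\nabla_{L_b}\Psi$) and from the Christoffel symbol arising from $\nabla_{\delb_a}(\nabla_{\delb_b}\Psi) - \nabla_{\delb_a}\nabla_{\delb_b}\Psi = \nabla_{\nabla_{\delb_a}\delb_b}\Psi$. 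The leading term is bounded directly by $t^{-2}\zetab^{-2}[\Psi]_{2,2}$, which matches the first contribution in the claimed inequality.

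For the remainders, each of $\mathrm{I}$--$\mathrm{IV}$ is estimated by the corresponding pointwise bound in Lemma~\ref{lem1-01-aout-2025}. Namely, \eqref{eq2-31-july-2025} yields $|\mathrm{I}|_{\vec n}\lesssim t^{-2}\zetab^{-3}(1+t|\del H|)[\Psi]_{1,1}$, which splits into $t^{-2}\zetab^{-3}[\Psi]_{1,1}$ (absorbed in the first claimed term) and $t^{-1}\zetab^{-3}|\del H|[\Psi]_{1,1}$. Estimate \eqref{eq4-31-july-2025} gives $|\mathrm{II}|_{\vec n}\lesssim t^{-2}\zetab^{-3}(1+t^2|\del H|^2)|\Psi|_{\vec n}$, yielding the $\zetab^{-3}|\del H|^2|\Psi|_{\vec n}$ term together with a lower-order $t^{-2}\zetab^{-3}|\Psi|_{\vec n}$ contribution already subsumed by the first claimed piece. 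Estimate \eqref{eq7-31-july-2025} combined with $(t+r)\lesssim t$ in $\Mcal^{\Hcal}$ controls $|\mathrm{III}|_{\vec n}$ by $t^{-1}\zetab^{-3}(|\del\del H|+|\del H|+|\del H|^2)|\Psi|_{\vec n}$, which is dominated by $t^{-1}\zetab^{-3}|\del H|_1|\Psi|_{\vec n}+\zetab^{-3}|\del H|^2|\Psi|_{\vec n}$. Finally $\mathrm{IV}$ gains an extra $t^{-1}$ and, after using \eqref{eq5-31-july-2025} to bound the Christoffel symbol by $|\del H|$, contributes terms already present on the right-hand side.

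The only delicate point is verifying that the $\zetab^{-2}$ loss from $|\sigmab^{ab}|$ together with the extra $\zetab^{-1}$ produced by each Clifford correction \eqref{eq1-31-july-2025}--\eqref{eq7-31-july-2025} never exceeds the $\zetab^{-3}$ weight advertised on the right-hand side; this is automatic because the Clifford correction bounds are already factored as $\zetab^{-1}(\ldots)$, so one multiplication by $\sigmab^{ab}$ suffices to saturate. No curvature commutator is required at this order, and in particular one does not need to invoke the identity \eqref{equa-26juillet2025a} for $\opCurve$: the whole argument is a finite linear expansion together with pointwise bounds, and the only geometric input beyond the definition of $\widehat{L_a}$ is the uniform estimate $|\sigmab^{ab}|\lesssim\zetab^{-2}$ derived from the uniform spacelike condition.
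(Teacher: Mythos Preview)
Your decomposition and the overall strategy are the same as the paper's: rewrite $\delb_a=t^{-1}L_a$, expand $\nabla_{L_b}\Psi$ via $\widehat{L_b}$, and control each piece with $|\sigmab^{ab}|\lesssim\zetab^{-2}$ and Lemma~\ref{lem1-01-aout-2025}. The leading term, the mixed terms $\mathrm{I}$, the quadratic Clifford term $\mathrm{II}$, and the Leibniz piece of $\mathrm{IV}$ are handled correctly (the Christoffel contribution you add to $\mathrm{IV}$ is spurious since the quantity to be bounded is the iterated derivative $\nabla_{\delb_a}(\nabla_{\delb_b}\Psi)$, not the Hessian, but this overcounting is harmless).

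The genuine gap is in your treatment of $\mathrm{III}$. Estimate \eqref{eq7-31-july-2025} is stated for the \emph{coordinate} derivative $\nabla_{\alpha}$; to pass to $\nabla_{L_a}=L_a^{\alpha}\nabla_{\alpha}$ you must absorb an additional factor $|L_a^{\alpha}|\lesssim t$. Doing this correctly gives
\[
|\mathrm{III}|_{\vec n}\lesssim t^{-2}\zetab^{-2}\cdot t\cdot\big[\zetab^{-1}|\del H|+\zetab^{-1}t\big(|\del\del H|+|\del H|^2\big)\big]\,|\Psi|_{\vec n}
= t^{-1}\zetab^{-3}|\del H|\,|\Psi|_{\vec n}+\zetab^{-3}\big(|\del\del H|+|\del H|^2\big)\,|\Psi|_{\vec n},
\]
not the $t^{-1}\zetab^{-3}(|\del\del H|+\ldots)$ you claim. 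The term $\zetab^{-3}|\del\del H|\,|\Psi|_{\vec n}$ is \emph{not} controlled by $t^{-1}\zetab^{-3}|\del H|_1|\Psi|_{\vec n}$, so the stated inequality does not follow from \eqref{eq7-31-july-2025} alone.

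The paper closes this gap with Lemma~\ref{lem3-22-aout-2025}, which computes $\nabla_{L_a}(g^{\mu\nu}\del_{\mu}\!\cdot\!\nabla_{\nu}L_b)$ directly and keeps the outer $L_a$ intact when it hits the Christoffel symbol: the dangerous term becomes $X^{\gamma}L_a(\Gamma_{\nu\gamma}^{\beta})$, bounded by $t\,|L_a\del H|\le t\,|\del H|_1$ rather than $t^2|\del\del H|$. This is precisely what produces the factor $t^{-1}\zetab^{-3}|\del H|_1$ in the lemma. You need this structural observation (or an equivalent one) to finish the argument.
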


%----------------------------------------- 

\begin{proof}
\bse
We have 
\be
\sigmab^{ab}\nabla_{\delb_a}(\nabla_{\delb_b}\Psi)
= t^{-2}\sigmab^{ab}\nabla_{L_a}\big(\nabla_{L_b}\Psi\big)
- t^{-3}\sigmab^{ab}x^a\nabla_{L_b}\Psi
=: T_1+T_2, 
\ee
in which the term $T_1$ can be written as
\be
\aligned
T_1 & =  t^{-2}\sigmab^{ab}\nabla_{L_a}\big(\widehat{L_b}\Psi + \frac{1}{4}g^{\mu\nu}\del_{\mu}\cdot\nabla_{\nu}L_b\cdot\Psi\big)
\\
& = t^{-2}\sigmab^{ab}\nabla_{L_a}\big(\widehat{L_b}\Psi\big) 
+ \frac{1}{4}t^{-2}\sigmab^{ab}\nabla_{L_a}\big(g^{\mu\nu}\del_{\mu}\cdot\nabla_{\nu}L_b\big)\cdot\Psi 
+ \frac{1}{4}t^{-2}\sigmab^{ab}g^{\mu\nu}\del_{\mu}\cdot\nabla_{\nu}L_b\cdot\nabla_{L_a}\Psi
\\
& = t^{-2}\sigmab^{ab}\widehat{L_a}(\widehat{L_b}\Psi) 
+ \frac{1}{4}t^{-2}\sigmab^{ab}g^{\mu\nu}\del_{\mu}\cdot\nabla_{\nu}L_a\cdot\widehat{L_b}\Psi
+ \frac{1}{4}t^{-2}\sigmab^{ab}g^{\mu\nu}\del_{\mu}\cdot\nabla_{\nu}L_b\cdot\widehat{L_a}\Psi
\\
& \quad + \frac{1}{16}t^{-2}\sigmab^{ab}g^{\mu\nu}g^{\mu'\nu'}\del_{\mu}\cdot\nabla_{\nu}L_b\cdot \del_{\mu'}\cdot\nabla_{\nu'}L_a\cdot\Psi
+\frac{1}{4}t^{-2}\sigmab^{ab}\nabla_{L_a}\big(g^{\mu\nu}\del_{\mu}\cdot\nabla_{\nu}L_b\big)\cdot\Psi
\\
& = t^{-2}\sigmab^{ab}\widehat{L_a}(\widehat{L_b}\Psi) 
+\frac{1}{2}t^{-2}\sigmab^{ab}g^{\mu\nu}\del_{\mu}\cdot\nabla_{\nu}L_a\cdot\widehat{L_b}\Psi
+\frac{1}{4}t^{-2}\sigmab^{ab}\nabla_{L_a}\big(g^{\mu\nu}\del_{\mu}\cdot\nabla_{\nu}L_b\big)\cdot\Psi
\\
& \quad + \frac{1}{16}t^{-2}\sigmab^{ab}g^{\mu\nu}g^{\mu'\nu'}\del_{\mu}\cdot\nabla_{\nu}L_b\cdot \del_{\mu'}\cdot\nabla_{\nu'}L_a\cdot\Psi
\\
& =:  T_{11} + T_{12} + T_{13} + T_{14},
\endaligned
\ee
while $T_2 =: T_{21} + T_{22}$ is decomposed as  
\be
T_2 = -t^{-3}x^a\sigmab^{ab}\widehat{L_b}\Psi - \frac{1}{4}t^{-3}x^a\sigmab^{ab}g^{\mu\nu}\del_{\mu}\cdot\nabla_{\nu}L_b\cdot\Psi
=: T_{21} + T_{22}.
\ee
It is immediate that
\begin{equation}
|T_{11}|_{\vec{n}}\lesssim t^{-2}\zetab^{-2}[\Psi]_{2,2},
\qquad 
|T_{21}|_{\vec{n}}\lesssim t^{-2}\zetab^{-2}[\Psi]_{1,1}.
\end{equation}
The estimate on $T_{12}$ and $T_{22}$ relies on \eqref{eq2-31-july-2025}, namely 
\begin{equation}
\zetab\big(|T_{12}|_{\vec{n}} + |T_{22}|\big)\lesssim t^{-2}\zetab^{-2}(1+t|\del H|)[\Psi]_{1,1}.
\end{equation}
The term $T_{14}$ is bounded by \eqref{eq4-31-july-2025}:
\begin{equation}
\zetab|T_{14}|_{\vec{n}}\lesssim t^{-2}\zetab^{-2}\big(1+t^2|\del H|^2\big)|\Psi|_{\vec{n}}.
\end{equation}
Finally, the estimate for $T_{13}$ is derived in Lemma \ref{lem3-22-aout-2025}, stated next.  
\ese
\end{proof}

%----------------------------------------------- 

It thus remains one final estimate. 

\begin{lemma}\label{lem3-22-aout-2025}
Assume that the uniform spacelike condition \eqref{eq-USA-condition} holds with a sufficiently small $\eps_s$. Then
one has 
\begin{equation}
\big|\nabla_{L_a}\big(g^{\mu\nu}\del_{\mu}\cdot\nabla_{\nu}L_b\big)\cdot\Psi\big|_{\vec{n}}
\lesssim \zetab^{-1}t\big(|\del H|_1 + t|\del H|^2\big). 
\end{equation}
\end{lemma}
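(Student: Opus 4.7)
My plan is to expand the expression
\[
\nabla_{L_a}\bigl(g^{\mu\nu}\del_{\mu}\cdot\nabla_{\nu}L_b\bigr)
= L_a^{\alpha}\,\nabla_{\alpha}\bigl(g^{\mu\nu}\del_{\mu}\cdot\nabla_{\nu}L_b\bigr)
\]
by substituting $X=L_b$ into the identity \eqref{eq16-16-aout-2025} and then to bound each of the resulting five terms, using Lemma~\ref{lem1'-18-july-2025} at the end to transfer the Clifford factor $\del_{\mu}\cdot\del_{\beta}\cdot\Psi$ into a quantity of size $\zetab^{-1}|\Psi|_{\vec{n}}$.

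First I would record the elementary facts about $L_b=t\del_b+x^b\del_t$ that will be used throughout: $\del_{\alpha}\del_{\nu}L_b^{\beta}=0$, $|\del_{\alpha}L_b^{\gamma}|\lesssim 1$, $|L_b^{\gamma}|\lesssim t$, $|L_a^{\alpha}|\lesssim t$, and, when $|H|\lesssim\eps_s$, $|g^{\mu\nu}|\lesssim 1$. Combined with the bound $|\Gamma_{\alpha\beta}^{\gamma}|\lesssim|\del H|$ of \eqref{eq5-31-july-2025}, the five contributions produced by the formula \eqref{eq16-16-aout-2025} split into three classes:
\begin{itemize}
\item The three ``$\del L_b\cdot\Gamma$'' type terms (coming from $\del_{\nu}L_b^{\gamma}\Gamma_{\alpha\gamma}^{\beta}$, from $\del_{\alpha}L_b^{\gamma}\cdot\Gamma_{\nu\gamma}^{\beta}$, and from $\Gamma_{\alpha\nu}^{\gamma}\del_{\gamma}L_b^{\beta}$) each contribute $\lesssim t\,|\del H|$ after accounting for $|L_a^{\alpha}|\lesssim t$.
\item The ``double Christoffel'' term $L_a^{\alpha}g^{\mu\nu}L_b^{\gamma}\Gamma_{\nu\gamma}^{\delta}\Gamma_{\alpha\delta}^{\beta}\,\del_{\mu}\cdot\del_{\beta}$ is bounded by $t\cdot t\cdot|\del H|^2=t^2|\del H|^2$.
\item The remaining ``$L_b\cdot\del\Gamma$'' term $L_a^{\alpha}g^{\mu\nu}L_b^{\gamma}\del_{\alpha}\Gamma_{\nu\gamma}^{\beta}\,\del_{\mu}\cdot\del_{\beta}$ is the delicate one.
\end{itemize}
After applying Lemma~\ref{lem1'-18-july-2025} to obtain the $\zetab^{-1}$ factor, the first two classes already yield $\zetab^{-1}t\,|\del H|$ and $\zetab^{-1}t^2|\del H|^2$, both compatible with the target bound.

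The main obstacle, and the point where one cannot be cavalier, is the third class. A naive estimate $|L_a^{\alpha}L_b^{\gamma}\del_{\alpha}\Gamma|\lesssim t\cdot t\cdot|\del\del H|\leq t^2|\del H|_1$ would cost an extra power of $t$ and break the claim. The remedy is to refuse to separate $L_a^{\alpha}$ from $\del_{\alpha}$ and instead to pair them as the single admissible derivative $L_a^{\alpha}\del_{\alpha}=L_a$, writing
\[
L_a^{\alpha}\,L_b^{\gamma}\,\del_{\alpha}\Gamma_{\nu\gamma}^{\beta}
= L_b^{\gamma}\,L_a\!\bigl(\Gamma_{\nu\gamma}^{\beta}\bigr).
\]
Differentiating $\Gamma_{\nu\gamma}^{\beta}=\tfrac{1}{2}g^{\beta\rho}(\del_{\nu}g_{\gamma\rho}+\del_{\gamma}g_{\nu\rho}-\del_{\rho}g_{\nu\gamma})$ with $L_a$, using $[L_a,\del_{\alpha}]=-\delta_{\alpha 0}\del_a-\delta_{\alpha a}\del_t$ (hence $|L_a\del H|\lesssim|\del L_a H|+|\del H|\lesssim|\del H|_1$), and using $|L_a g^{\alpha\beta}|\lesssim|L_a H|\lesssim|H|_1\lesssim\eps_s$, one obtains
\[
|L_a\Gamma|\;\lesssim\;|H|_1\,|\del H|+|\del H|_1\;\lesssim\;|\del H|_1,
\]
so that $|L_b^{\gamma}L_a(\Gamma)|\lesssim t\,|\del H|_1$. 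Combining all five contributions with the Clifford bound from Lemma~\ref{lem1'-18-july-2025} then yields
\[
\big|\nabla_{L_a}\bigl(g^{\mu\nu}\del_{\mu}\cdot\nabla_{\nu}L_b\bigr)\cdot\Psi\big|_{\vec{n}}
\lesssim \zetab^{-1}\bigl(t\,|\del H|+t\,|\del H|_1+t^2|\del H|^2\bigr)\,|\Psi|_{\vec{n}},
\]
and the statement follows after absorbing $t|\del H|$ into $t|\del H|_1$.
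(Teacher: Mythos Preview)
Your proposal is correct and follows essentially the same approach as the paper: both apply the identity \eqref{eq16-16-aout-2025} with $X=L_b$ contracted against $L_a^{\alpha}$, both isolate the single delicate term $L_b^{\gamma}L_a^{\alpha}\del_{\alpha}\Gamma_{\nu\gamma}^{\beta}$, and both resolve it by refusing to split $L_a^{\alpha}$ from $\del_{\alpha}$ so that one obtains $L_a\del H$, which is controlled by $|\del H|_1$. The paper expands $\Gamma$ explicitly before applying $L_a$, whereas you differentiate $\Gamma$ as a package and commute $[L_a,\del]$; these are equivalent. (Note that the stated inequality is missing a factor $|\Psi|_{\vec{n}}$ on the right-hand side, which you correctly restore.)
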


\begin{proof}
\bse
We choose $X=L_b$ in \eqref{eq16-16-aout-2025}, together with $L_a = Y= Y^{\alpha}\del_{\alpha}$, and obtain
\be
\aligned
Y^{\alpha}\nabla_{\alpha}\big(g^{\mu\nu}\del_{\mu}\cdot\nabla_{\nu}X\big) 
& = g^{\mu\nu}X^{\gamma}Y^{\alpha}\del_{\alpha}(\Gamma_{\nu\gamma}^{\beta})\del_{\mu}\cdot\del_{\beta}
- g^{\mu\nu}\Gamma_{\alpha\nu}^{\gamma}Y^{\alpha}\del_{\gamma}X^{\beta} \del_{\mu}\cdot\del_{\beta}
\\
& \quad + g^{\mu\nu}Y^{\alpha}\big(\del_{\nu}X^{\gamma}\Gamma_{\alpha\gamma}^{\beta} 
+ \del_{\alpha}(X^{\gamma})\Gamma_{\nu\gamma}^{\beta}
+ X^{\gamma}\Gamma_{\nu\beta}^{\delta}\Gamma_{\alpha\delta}^{\beta}\big)\del_{\mu}\cdot\del_{\beta}\\
=:  T_1+T_2.
\endaligned
\ee
It is easy to see that
\begin{equation}\label{eq5-23-aout-2025}
|T_2\cdot\Psi|_{\vec{n}}\lesssim \zetab^{-1}t\big(|\del H| + t|\del H|^2\big)|\Psi|_{\vec{n}}
\end{equation}
due to the fact that $\big|\del_{\alpha}X^{\beta}\big|\lesssim 1$, and $|X^{\beta}|,|Y^{\alpha}|\lesssim t$. Then we focus on $T_1$ and observe that
\be
\aligned
T_1 & = g^{\mu\nu}X^{\gamma}Y^{\alpha}\del_{\alpha}\Gamma_{\nu\gamma}^{\beta}\del_{\mu}\cdot\del_{\beta}
\\
& =
\frac{1}{2}g^{\mu\nu}X^{\gamma}\del_{\alpha}\big(g^{\beta\delta})(\del_{\nu}g_{\gamma\delta} + \del_{\gamma}g_{\nu\delta} - \del_{\delta}g_{\nu\gamma}\big)\del_{\mu}\cdot\del_{\beta}
\\
& \quad + \frac{1}{2}g^{\mu\nu}X^{\gamma}g^{\beta\delta}\big(\del_{\alpha}\del_{\nu}g_{\gamma\delta} 
+ \del_{\alpha}\del_{\gamma}g_{\nu\delta}
- \del_{\alpha}\del_{\delta}g_{\nu\gamma}
\big)\del_{\mu}\cdot\del_{\beta}
=:  T_{11} + T_{12}.
\endaligned
\ee
The term $T_{11}$ is quadratic in nature, and we see that
\begin{equation}\label{eq6-23-aout-2025}
|T_{11}\cdot\Psi|_{\vec{n}}\lesssim\zetab^{-1}t^2|\del H|^2. 
\end{equation}
We then turn our attention to $T_2$ and write its expression in the semi-hyperboloidal frame:
\be
\aligned
T_{12} & = \frac{1}{2}g^{\mu\nu}X^{\gamma}Y^{\alpha}g^{\beta\delta}
\big(\del_{\alpha}\del_{\nu}H_{\gamma\delta} 
+ \del_{\alpha}\del_{\gamma}H_{\nu\delta}
- \del_{\alpha}\del_{\delta}H_{\nu\gamma}
\big)\del_{\mu}\cdot\del_{\beta}
\\
& = \frac{1}{2}g^{\mu\nu}X^{\gamma}g^{\beta\delta}
\big(L_a\del_{\nu}H_{\gamma\delta} 
+ L_a\del_{\gamma}H_{\nu\delta}
- L_a\del_{\delta}H_{\nu\gamma}
\big)\del_{\mu}\cdot\del_{\beta}, 
\endaligned
\ee
therefore  
\begin{equation}\label{eq4-23-aout-2025}
|T_{12}\cdot\Psi|_{\vec{n}}\lesssim \zetab^{-1}t|\del H|_1.
\end{equation}
Summing-up \eqref{eq5-23-aout-2025}, \eqref{eq6-23-aout-2025} and \eqref{eq4-23-aout-2025}, we arrive at  the desired estimate.
\ese
\end{proof}

%---------------------------------------------------------------------------------------

We are now in a position to return to \eqref{eq11-22-july-2025} and, for the first term in the right-hand side of \eqref{eq11-22-july-2025}, we write the following decomposition.

\begin{lemma}\label{lem5-23-july-2025}
Assume that the uniform spacelike condition  \eqref{eq-USA-condition} holds holds with a sufficiently small $\eps_s$. Then one has 
\begin{equation}
\aligned
- \lapsb^{-2}\nabla_{\vec{L}}\big(\nabla_{\vec{L}}\Psi\big) - 3s^{-1}\nabla_{\vec{L}}\Psi 
& =  - \lapsb^{-2}s^{-3/2}\nabla_{\vec{L}}\big(\nabla_{\vec{L}}(s^{3/2}\Psi)\big) 
+ R_1[H,\Psi]
\endaligned
\end{equation}
with
\begin{equation}\label{eq7-25-july-2025}
\aligned
|R_1[H,\Psi]|_{\vec{n}}
& \lesssim  t^{-2}\zetab^{-2}|\Psi|_{\vec{n}} + s^{-1}(s/t)^{-1}|H||\Psi|_{\vec{n}}
\\
& \quad + s^{-2}(s/t)^{-1}\zetab^{-1}|H|([\Psi]_{1,1} + \zetab^{-1}|\Psi|_{\vec{n}})
+ s^{-1}(s/t)^{-2}\zetab^{-2}|H||\del H||\Psi|_{\vec{n}}.
\endaligned
\end{equation}
\end{lemma}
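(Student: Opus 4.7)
The guiding idea is that the exponent $3/2$ is chosen precisely so that, in the flat reference metric, conjugation of $-\nabla_{\vec L}^2 - 3s^{-1}\nabla_{\vec L}$ by the weight $s^{3/2}$ reproduces $-s^{-3/2}\nabla_{\vec L}^2(s^{3/2}\cdot)$ modulo a Hardy-type $s^{-2}$ term. In the curved case it then remains to track only (i) the replacement of the unit top-order coefficient by $\lapsb^{-2}$, and (ii) perturbations of the identity $\vec L(s) = 1$ inherited from the flat case. The crucial preliminary observation, following from Lemma~\ref{lem1-12-june-2025} together with \eqref{eq5-12-june-2025}, is that $\mathrm{grad}(s) = \lapsb^{-1}\vec n$, so $\vec n(s) = \lapsb^{-1}$, and hence
\begin{equation*}
\vec L(s) \;=\; \lapsb\,\vec n(s) \;\equiv\; 1
\end{equation*}
independently of the metric perturbation; equivalently, $\vec L = \delb_s$ in the adapted coordinates.

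With this in hand, a double application of the Leibniz rule gives
\begin{equation*}
\nabla_{\vec L}(s^{3/2}\Psi) \;=\; \tfrac{3}{2}s^{1/2}\Psi + s^{3/2}\nabla_{\vec L}\Psi,
\end{equation*}
\begin{equation*}
\nabla_{\vec L}\bigl(\nabla_{\vec L}(s^{3/2}\Psi)\bigr) \;=\; \tfrac{3}{4}s^{-1/2}\Psi \;+\; 3s^{1/2}\nabla_{\vec L}\Psi \;+\; s^{3/2}\nabla_{\vec L}(\nabla_{\vec L}\Psi).
\end{equation*}
Multiplying the second identity by $-\lapsb^{-2}s^{-3/2}$ and subtracting $3s^{-1}\nabla_{\vec L}\Psi$ from each side yields the claimed decomposition with the explicit remainder
\begin{equation*}
R_1[H,\Psi] \;=\; \tfrac{3}{4}\lapsb^{-2}s^{-2}\,\Psi \;+\; 3s^{-1}\bigl(\lapsb^{-2}-1\bigr)\nabla_{\vec L}\Psi,
\end{equation*}
which vanishes to first order in $H$ in the second summand and reduces to $\tfrac{3}{4}s^{-2}\Psi$ in the flat limit.

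For the remainder bound \eqref{eq7-25-july-2025}, Claim~\ref{cor1-16-june-2025} gives $\lapsb\sim\lapsb_\eta(\zeta/\zetab)$, so in $\Mcal^\Hcal$ (where $\lapsb_\eta=1$) one has $\lapsb^{-2}\sim\zetab^{2}\zeta^{-2}$, and the first summand contributes $\lesssim t^{-2}\zetab^{-2}|\Psi|_{\vec n}$ once $\zeta = s/t$ is combined with the light-bending bound $\zetab^{2}\lesssim s/t$ of \eqref{eq3-aout-2025}, together with a lower-order correction of the form $s^{-1}(s/t)^{-1}|H||\Psi|_{\vec n}$ arising from the identity $\lapsb^{-2} - 1 = -\zeta^{-2}H^{\Ncal00} - R[H]$ of \eqref{eq4-23-july-2025}. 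For the second summand the same identity yields $|\lapsb^{-2}-1|\lesssim (s/t)^{-2}|H|$, and the plan is to decompose $\vec L = (s/t)\del_t - t^{-1}\betab^{a}L_a$ (using \eqref{eq8-31-july-2025} on $\betab^a$), and to convert $\nabla_\alpha\Psi$ into $\widehat{\del_\alpha}\Psi$ via the Clifford correction $\tfrac{1}{4}g^{\mu\nu}\del_\mu\cdot\nabla_\nu\del_\alpha\cdot\Psi$, whose $|\cdot|_{\vec n}$-norm is controlled by \eqref{eq1-31-july-2025}--\eqref{eq2-31-july-2025}; this produces the terms $s^{-2}(s/t)^{-1}\zetab^{-1}|H|\bigl([\Psi]_{1,1}+\zetab^{-1}|\Psi|_{\vec n}\bigr)$ and the genuinely quadratic piece $s^{-1}(s/t)^{-2}\zetab^{-2}|H||\del H||\Psi|_{\vec n}$.

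The main obstacle, and the place where the light-bending hypothesis enters nontrivially, is the bookkeeping comparing $\lapsb^{-2}s^{-2}$ to $t^{-2}\zetab^{-2}$: since $\zetab^2 = \zeta^2 + |H^{\Ncal00}|$ may a priori exceed $\zeta^2$ in a transitional strip near the cone, the clean estimate demands that $\zetab/\zeta$ stay controlled. This is exactly what the uniform spacelike condition \eqref{eq-USA-condition} (giving $|H^{\Ncal00}|\lesssim\eps_s\zeta$) together with the sharper bound $\zetab^{2}\lesssim s/t$ of \eqref{eq3-aout-2025} delivers, once $\eps_s$ is chosen sufficiently small; the trickiest verification will be in the regime $\zeta\ll 1$, where one must play $|H^{\Ncal00}|\lesssim\eps_s\zeta$ against $\zetab^{-2}\sim |H^{\Ncal00}|^{-1}$ to ensure absorption into the first term of \eqref{eq7-25-july-2025} rather than a spurious logarithmic loss.
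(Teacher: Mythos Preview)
Your approach mirrors the paper's: use $\vec L(s)=1$ to expand and read off $R_1=\tfrac{3}{4}\lapsb^{-2}s^{-2}\Psi+3s^{-1}(\lapsb^{-2}-1)\nabla_{\vec L}\Psi$, then bound via \eqref{eq4-23-july-2025} and \eqref{eq8-25-july-2025}. The identity and the explicit remainder are correct.

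There is, however, a real gap in your handling of the Hardy piece $\tfrac{3}{4}\lapsb^{-2}s^{-2}\Psi$. You correctly record $\lapsb^{-2}\sim(\zetab/\zeta)^2$ (the paper's displayed $(\zeta/\zetab)^2$ in \eqref{eq6-25-july-2025} is a misprint), but your plan to reach the first term of \eqref{eq7-25-july-2025} via $\zetab^2\lesssim s/t$ cannot work: already the flat contribution $s^{-2}=t^{-2}\zeta^{-2}$ strictly exceeds $t^{-2}\zetab^{-2}$ whenever $|H^{\Ncal00}|>0$, and no inequality relating $\zetab$ to powers of $s/t$ repairs this. The correct mechanism is the \emph{exact} identity
\[
s^{-2}-t^{-2}\zetab^{-2}\;=\;t^{-2}\bigl(\zeta^{-2}-\zetab^{-2}\bigr)\;=\;s^{-2}\zetab^{-2}|H^{\Ncal00}|,
\]
so the excess carries an explicit factor $|H^{\Ncal00}|\lesssim|H|$ and is absorbed by the \emph{second} term $s^{-1}(s/t)^{-1}|H||\Psi|_{\vec n}$ of \eqref{eq7-25-july-2025} (via $\zetab^2\ge(s/t)^2\ge t^{-1}$ in $\Mcal^{\Hcal}$), not the first. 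Separately, for the second summand the dominant piece of $\nabla_{\vec L}\Psi$ in \eqref{eq8-25-july-2025} is $(s/t)\widehat{\del_t}\Psi$; multiplied by $s^{-1}(s/t)^{-2}|H|$ this contributes $s^{-1}(s/t)^{-1}|H|\,|\widehat{\del_t}\Psi|_{\vec n}$, which you should track explicitly rather than listing only the $[\Psi]_{1,1}$ and quadratic pieces.
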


\begin{proof}
\bse
By a direct calculation we have 
\be
\nabla_{\vec{L}}\big(\nabla_{\vec{L}}(s^{3/2}\Psi)\big) 
= s^{3/2}\nabla_{\vec{L}}(\nabla_{\vec{L}}\Psi) + 3s^{1/2}\nabla_{\vec{L}}\Psi
+ \frac{3}{4}s^{-1/2}\Psi, 
\ee
therefore 
\be
- \lapsb^{-2}\nabla_{\vec{L}}\big(\nabla_{\vec{L}}\Psi\big) - 3s^{-1}\nabla_{\vec{L}}\Psi 
=
- \lapsb^{-2}s^{-3/2}\nabla_{\vec{L}}\big(\nabla_{L}(s^{3/2}\Psi)\big) 
+ 3s^{-1}(\lapsb^{-2}-1)\nabla_{\vec{L}}\Psi
+ \frac{3}{4}s^{-2}\lapsb^{-2}\Psi.
\ee
Recalling \eqref{eq4-23-july-2025}, we obtain
\begin{equation}
\aligned
- \lapsb^{-2}\nabla_{\vec{L}}\big(\nabla_{\vec{L}}\Psi\big) - 3s^{-1}\nabla_{\vec{L}}\Psi 
& = - \lapsb^{-2}s^{-3/2}\nabla_{\vec{L}}\big(\nabla_{L}(s^{3/2}\Psi)\big) 
\\
& \quad -   3s^{-1}\big((s/t)^{-2}H^{\N00} + R[H]\big)\nabla_{\vec{L}}\Psi
+ \frac{3}{4}s^{-2}\lapsb^{-2}\Psi.
\endaligned
\end{equation}
Here, we also have 
\begin{equation}\label{eq6-25-july-2025}
|R[H]|\lesssim |H|,\quad (s/t)^{-2}\big|H^{\Ncal00}\big|\lesssim (s/t)^{-2}|H|,\quad
|\lapsb^{-2}|\lesssim (\zeta/\zetab)^2.
\end{equation}
We observe that
\be
\aligned
\nabla_{\vec{L}}\Psi & =  \nabla_{\delb_0}\Psi - \betab^a\nabla_{\delb_a}\Psi 
= J\nabla_t\Psi - t^{-1}\betab^a\nabla_{L_a}\Psi
\\
& = (s/t)\widehat{\del_t}\Psi - t^{-1}\betab^a\widehat{L_a}\Psi
+\frac{1}{4}(s/t)g^{\mu\nu}\del_{\mu}\cdot\nabla_{\nu}\del_t\cdot\Psi
- \frac{1}{4t}g^{\mu\nu}\betab^a\del_{\mu}\cdot\nabla_{\nu}L_a\cdot\Psi.
\endaligned
\ee
In view of \eqref{eq1-31-july-2025} and \eqref{eq2-31-july-2025}, this leads us to 
\begin{equation}\label{eq8-25-july-2025}
\big|\nabla_{\vec{L}}\Psi - (s/t)\widehat{\del_t}\Psi \big|_{\vec{n}}
\lesssim  t^{-1}\zetab^{-2}|\Psi|_{\vec{n}} + t^{-1}\zetab^{-1}[\Psi]_{1,1} 
+ \zetab^{-2}|\del H||\Psi|_{\vec{n}}, 
\end{equation}
where we used \eqref{eq8-31-july-2025}. Finally, \eqref{eq6-25-july-2025} together with \eqref{eq8-25-july-2025} implies \eqref{eq7-25-july-2025}.
\ese
\end{proof}

%--------------------------------------------------------------------------------------------------

Now we are ready to derive the desired decomposition.

\begin{proposition}[Line decomposition of the spinorial D'Alembert operator]
\label{prop2-23-july-2025}
Assume that the uniform spacelike condition \eqref{eq-USA-condition} holds with a sufficiently small $\eps_s$. Then for any sufficiently regular solution $\Psi$ to the equation
\begin{equation}\label{eq1-24-july-2025}
\opDirac\Psi + \mathrm{i}M\Psi = \Phi, 
\end{equation}
one has 
\begin{equation}\label{eq2-26-july-2025}
\nabla_{\vec{L}}\big(\nabla_{\vec{L}}(s^{3/2}\Psi)\big)  + \lapsb^2M^2(s^{3/2}\Psi) 
=
\lapsb^2s^{3/2}\big(S[H,\Phi] + R[H,\Psi] - \nabla_W\Psi\big)
\end{equation}
with
\begin{equation}\label{eq4-27-july-2025}
S[H,\Phi] : =  s^{3/2}\big(\opDirac\Phi - \mathrm{i}M\Phi\big),
\end{equation}
and
\begin{equation}\label{eq5-25-july-2025}
\aligned
& \big|R[H,\Psi]\big|_{\vec{n}}
\\
& \lesssim 
t^{-2}\zetab^{-2}\big([\Psi]_{2,2} + \zetab^{-1}[\Psi]_{1,1}\big) 
+ |R_g||\Psi|_{\vec{n}}
\\
& \quad+ t^{-1}\zetab^{-2}|H|_1[\Psi]_{1,1}\sum_{k=0}^2\big((s/t)^{-2}|H|\big)^k
+ t^{-1}\zetab^{-3}|\Psi|_{\vec{n}}|H|_1\sum_{k=0}^2\big((s/t)^{-2}|H|\big)^k
\\
& \quad+ \zetab^{-3}|\Psi|_{\vec{n}}|\del H||H|_1|\sum_{k=0}^2\big((s/t)^{-2}|H|\big)^k
+ s^{-2}(s/t)^{-2}\zetab^{-1}|\del H||H||\Psi|_{\vec{n}}
\\
& \quad+t^{-1}\zetab^{-3}|\del H|[\Psi]_{1,1} +  t^{-1}\zetab^{-3}|\del H|_1|\Psi|_{\vec{n}}
+ s^{-2}(s/t)^{-1}\zetab^{-2}|H||\Psi|_{\vec{n}},
\endaligned
\end{equation}
where $R_g$ is scalar curvature associated with the metric $g$.
\end{proposition}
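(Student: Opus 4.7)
The plan is to convert the first-order Dirac equation into a second-order wave-type equation for $\Psi$, then transport that wave equation along the timelike vector $\vec L = \lapsb \vec n$ by expressing the metric d'Alembertian in the Cauchy adapted frame of Section~\ref{section===10-1}. All three ingredients are now in place: the algebraic square of $\opDirac$ in Lemma~\ref{lem1-21-feb-2025}, the frame decomposition \eqref{eq11-22-july-2025}, and the three reduction lemmas \ref{lem4-23-july-2025}--\ref{lem5-23-july-2025} that handle each summand. So the whole proof reduces to an algebraic assembly plus the addition of the respective remainders.

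The first step is to apply the operator $(\opDirac-\mathrm{i}M)$ on the left of the Dirac equation \eqref{eq1-24-july-2025}. Since $\opDirac$ is $\mathbb{C}$-linear and commutes with multiplication by the real constant $M$, one obtains $(\opDirac^2+M^2)\Psi=\opDirac\Phi-\mathrm{i}M\Phi$. Invoking Lemma~\ref{lem1-21-feb-2025}, that is, $\opDirac^2\Psi=-\Box_g\Psi-\tfrac14 R_g\Psi$, this rewrites as
\begin{equation*}
\Box_g\Psi \;=\; M^2\Psi-\tfrac14 R_g\Psi-\bigl(\opDirac\Phi-\mathrm{i}M\Phi\bigr).
\end{equation*}
Thus the curvature term $-\tfrac14 R_g\Psi$ appears naturally on the right-hand side and will be absorbed in the remainder $R[H,\Psi]$; this explains the scalar-curvature contribution $|R_g||\Psi|_{\vec n}$ in \eqref{eq5-25-july-2025}.

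The second step is to rewrite the left-hand side in the Cauchy adapted frame. Identity \eqref{eq11-22-july-2025} yields three contributions: the $\vec L$-second derivative, the spatial piece $\sigmab^{ab}\nabla_{\delb_a}\nabla_{\delb_b}\Psi$, and the frame-derivative term $g^{\alpha\beta}\del_\alpha(\grave{\Psi}_\beta^\gamma)\nabla_{\grave v_\gamma}\Psi$, together with $-\nabla_W\Psi$ from the generalized wave gauge (the vector $W$ is precisely the one appearing on the right of the claim). The plan is to: apply Lemma~\ref{lem4-23-july-2025} to replace the frame-derivative contribution by $-3s^{-1}\nabla_{\vec L}\Psi+R_3[H,\Psi]$; couple this new $-3s^{-1}\nabla_{\vec L}\Psi$ with $-\lapsb^{-2}\nabla_{\vec L}(\nabla_{\vec L}\Psi)$ and use Lemma~\ref{lem5-23-july-2025}, which repackages the pair as $-\lapsb^{-2}s^{-3/2}\nabla_{\vec L}\bigl(\nabla_{\vec L}(s^{3/2}\Psi)\bigr)+R_1[H,\Psi]$; and keep the spatial piece $\sigmab^{ab}\nabla_{\delb_a}\nabla_{\delb_b}\Psi$ as part of the remainder, estimated by Lemma~\ref{lem3-23-july-2025}. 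Substituting into the wave equation derived in the first step and multiplying through by $-\lapsb^2 s^{3/2}$ produces exactly
\begin{equation*}
\nabla_{\vec L}\bigl(\nabla_{\vec L}(s^{3/2}\Psi)\bigr)+\lapsb^2M^2(s^{3/2}\Psi)
=\lapsb^2 s^{3/2}\bigl(S[H,\Phi]+R[H,\Psi]-\nabla_W\Psi\bigr),
\end{equation*}
with $S[H,\Phi]$ as in \eqref{eq4-27-july-2025} and
\begin{equation*}
R[H,\Psi]\;=\;\tfrac14 R_g\Psi+R_1[H,\Psi]+\sigmab^{ab}\nabla_{\delb_a}\nabla_{\delb_b}\Psi+R_3[H,\Psi].
\end{equation*}

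The third and final step is to verify the pointwise bound \eqref{eq5-25-july-2025} by summing the four contributions in $|\cdot|_{\vec n}$. The $\tfrac14 R_g\Psi$ term gives the $|R_g||\Psi|_{\vec n}$ summand; the bound \eqref{eq7-25-july-2025} for $R_1$ provides the $s^{-2}(s/t)^{-1}\zetab^{-2}|H||\Psi|_{\vec n}$ and related contributions; the spatial piece yields the $t^{-2}\zetab^{-2}([\Psi]_{2,2}+\zetab^{-1}[\Psi]_{1,1})$ block and the $|\del H|$--dependent terms (via Lemma~\ref{lem3-23-july-2025}); and \eqref{eq13-31-july-2025} contributes the remaining terms with the geometric series $\sum_{k=0}^2((s/t)^{-2}|H|)^k$ arising from the expansion $\sigmab^{ab}-\sigmab_\eta^{ab}$ in Proposition~\ref{prop1-27-june-2025}. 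The main obstacle is purely bookkeeping: checking that every term produced by the three remainder estimates is dominated by a summand of \eqref{eq5-25-july-2025}, using only the uniform-spacelike/light-bending relation $\zetab^2\lesssim s/t$ in the hyperboloidal region (see \eqref{eq3-aout-2025}) and the smallness $|H|\lesssim\eps_s$. No further analytic input is needed, so the proof reduces to this careful comparison of weights.
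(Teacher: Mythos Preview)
Your proposal is correct and follows essentially the same route as the paper's proof: apply $(\opDirac-\mathrm{i}M)$ to the Dirac equation and use Lemma~\ref{lem1-21-feb-2025} to obtain the second-order equation \eqref{eq1-01-aout-2025}, then substitute the Cauchy-adapted decomposition \eqref{eq11-22-july-2025} of $\Box_g\Psi$ and invoke Lemmas~\ref{lem4-23-july-2025}, \ref{lem5-23-july-2025}, \ref{lem3-23-july-2025} to produce the $\vec{L}$-transport equation and collect the remainder $R[H,\Psi]=\tfrac14 R_g\Psi+R_1+\sigmab^{ab}\nabla_{\delb_a}(\nabla_{\delb_b}\Psi)+R_3$ with the bound \eqref{eq5-25-july-2025}. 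The only quibble is the sign when you say ``multiplying through by $-\lapsb^2 s^{3/2}$'': after rearranging so that the $\vec{L}$-second derivative and $M^2\Psi$ sit on the same side, the correct multiplier is $+\lapsb^2 s^{3/2}$; but this is cosmetic and does not affect the argument.
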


\begin{proof}
\bse
We recall Lemma~\ref{lem1-21-feb-2025} and write 
\begin{equation}\label{eq1-01-aout-2025}
- \Box_g\Psi + M^2\Psi= \opDirac\Phi - \mathrm{i}M\Phi + \frac{1}{4}R\Psi.
\end{equation}
Then \eqref{eq11-22-july-2025} implies
\begin{equation}
\aligned
& s^{-3/2}\lapsb^{-2}\big(\nabla_{\vec{L}}\big(\nabla_{\vec{L}}(s^{3/2}\Psi)\big) 
+\lapsb^2M^2(s^{3/2}\Psi)\big)
\\
& = 
S[H,\Phi] - \nabla_W\Psi 
+
R_1[H,\Psi] + \sigmab^{ab}\nabla_{\delb_a}(\nabla_{\delb_b}\Psi) + R_3[H,\Psi] 
+ \frac{1}{4}R\Psi.
\endaligned
\end{equation}
Here, in the last term, the contributions $R_1,R_3$ and $\sigmab^{ab}\nabla_{\delb_a}(\nabla_{\delb_b}\Psi)$ are bounded by Lemmas~\ref{lem4-23-july-2025}, \ref{lem3-23-july-2025}, and \ref{lem5-23-july-2025}.
\ese
\end{proof}

}

%------------------------------------------------------------------------------------------------------------------------------

\subsection{ The orthogonal curves}
\label{section===10-3}

{ 

Next, let us study the geometry of the integral curves of the vector fields $\vec{L}$ (or $\vec{n}$, modulo an reparameterization), which we call the {\bf orthogonal curves} associated with the foliation $\bigcup_{s}\Mcal_s$. By definition, they are orthogonal everywhere to each hypersurfaces $\Mcal_s$. For any point $(t,x)$ in the interior of $\Mcal^{\Hcal}_{[s_0,s_1]}$, we denote by
\be
\gamma_{t,x}: (s- \eps,s+\eps)\mapsto \Mcal^{\Hcal}_{[s_0,s_1]}
\ee
the integral curve of $\vec{L}$ satisfying $\gamma_{t,x}(s) = (t,x)$ with $s = \sqrt{t^2-|x|^2}$. It is clear that when $g=\eta$ we have
\begin{equation}
\vec{L} = \del_s + (x^a/s)\delb_a \qquad \text{ when $g=\eta$}
\end{equation}
and, in this case,
\be
\gamma_{t,x}(\lambda) = (\lambda t/s,\lambda x/s), \qquad \text{ when $g=\eta$}
\ee

For convenience in the discussion, we also introduce the following notation. Given any point $(t,x)\in\Mcal^{\Hcal}_{[s_0,s_1]}$ we define 
\be
\aligned
s_{t,x}^* :& =\inf_{\tau\in[s_0,s]}\{\gamma_{t,x}([\tau,s])\subset\Mcal^{\Hcal}_{[s_0,s]}\}.
\\
\Mcal^{\theta}_{[s_0,s_1]} :& =\Mcal^{\Hcal}_{[s_0,s_1]}\cap\{(s/t)\leq \theta\}
\quad \text{for}\quad 0<\theta\leq 1.
\endaligned
\ee
Observe that when $\theta<\theta'$, we have the ordering property $\Mcal^{\theta}_{[s_0,s_1]}\subset\Mcal^{\theta'}_{[s_0,s_1]}$. Furthermore, when 
$
\theta < \theta_0 = \frac{2s_0}{s_0^2+1},
$
we have  
\be
(t,x)\in \Mcal^{\theta}_{[s_0,s_1]}\quad\Rightarrow\quad s = \sqrt{t^2-|x|^2}> s_0,
\ee
and
\be
\del(\Mcal^{\theta}_{[s_0,s_1]})\cap\Mcal_{s_0} = \emptyset.
\ee
For the region near the light cone, i.e., $\Mcal^{\frac{3}{5}}_{[s_0,s_1]}$, we have the following description of the orthogonal curves.

\begin{proposition}[Geometry of the orthogonal curves to the foliation]
\label{lem2-24-july-2025}
Assume that the uniform spacelike condition \eqref{eq-USA-condition} hold with a sufficiently small $\eps_s$. Furthermore, assume that
\begin{equation}\label{eq-KG-decay-condition}
|H| \leq \delta\eps_s s^{- \delta}, \qquad \delta>0.
\end{equation}
Then for any $(t,x)\in \Mcal^{\frac{3}{5}}_{[s_0,s_1]} = \Mcal^{\Hcal}_{[s_0,s_1]}\cap \{r\geq 4t/5\}$, one has 
\bse
\begin{equation}
\gamma_{t,x}([s_{t,x}^*,s])\subset \Mcal^{\frac{\sqrt{7}}{4}}_{[s_0,s_1]} = \{r\geq 3t/4\}\cap\Mcal^{\Hcal}_{[s_0,s_1]},
\end{equation}
and
\begin{equation}\label{eq5-19-june-2025}
(s/t)|_{\gamma_{t,x}(\tau)}\leq \frac{10}{9}(s/t)|_{\gamma_{t,x}(s)}.
\end{equation}
for $\tau\in[s_{t,x}^*,s]$. When $s_0=2$, $\gamma_{t,x}(s_{t,x}^*)\in\{r=t-1\}$. Furthermore, for any scalar-valued function $f$ defined in the hyperboloidal domain $\Mcal^{\Hcal}_{[s_0,s_1]}$ and satisfying
\begin{equation}
|f(t,x)|\lesssim C_fb(s/t)^a s^{-1-b}, \qquad b>0, a\in \RR
\end{equation}
and for any $(t,x)\in\Mcal^{\Hcal}_{[s_0,s_1]}$, one has the integral bound 
\begin{equation}\label{eq6-19-june-2025}
\int_{s_{t,x}^*}^s|f|_{t,x}(\lambda)\diff \lambda\lesssim C_f(s/t)^{a+b}. 
\end{equation}
\ese
\end{proposition}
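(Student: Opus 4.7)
The plan is to work in the slab coordinates $(s,x)$, where $\vec{L}=\lapsb\vec{n}=\delb_0-\betab^a\delb_a$ reduces simply to $\del_s-\betab^a\,\del/\del x^a$. An orthogonal integral curve $\gamma_{t,x}$ is therefore, after reparameterization by $s$, the solution of the backward Cauchy problem $dx^a/d\tau=-\betab^a(\tau,x(\tau))$ with $x^a(s)=x^a$. In the flat case $\betab^a_{\eta}=-x^a/\tau$, so the solution is the radial line $x^a(\tau)=(\tau/s)x^a$ along which the quotient $y^a:=x^a/\tau$ is constant, which is equivalent to $(s/t)|_{\gamma}$ being preserved. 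In the curved case, setting $y^a(\tau):=x^a(\tau)/\tau$, the equation becomes
\begin{equation*}
\tau\,\frac{dy^a}{d\tau}=-\bar\delta^a,\qquad \bar\delta^a:=\betab^a-\betab^a_{\eta},
\end{equation*}
so the deviation from the flat trajectory is entirely governed by the size of $|\bar\delta^a|/\tau$.

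The key step will be a pointwise bound on $\bar\delta^a$. Starting from the explicit decomposition \eqref{eq2-27-june-2025} of Claim~\ref{clm-1-29-june-2025}, combined with Proposition~\ref{prop1-27-june-2025} and the bounds $|\betab^a_{\eta}|\lesssim 1/\zeta$ and $|\Hb_{0a}|\lesssim \zeta|H|$, one obtains
\begin{equation*}
|\bar\delta^a|\lesssim \frac{|H^{\Ncal 00}|}{\zeta\,\zetab^2}+\frac{\zeta|H|}{\zetab^2}\lesssim \frac{|H|}{\zeta\,\zetab^2}.
\end{equation*}
The hard part will be to close a bootstrap on the region visited by the curve: assume $\gamma_{t,x}$ stays in $\Mcal^{\Hcal}_{[s_0,s_1]}\cap\{r\geq 3t/4\}$, where $\zeta\geq\sqrt{7}/4$ and $\zetab\gtrsim 1$. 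Under \eqref{eq-KG-decay-condition}, this yields $|\bar\delta^a|\lesssim\eps_s\tau^{-\delta}$ and consequently $\int_{s^*_{t,x}}^s|\bar\delta^a|/\tau\,d\tau\lesssim\eps_s/\delta$. Integrating the $y$-ODE then gives $|y(\tau)-y(s)|\lesssim\eps_s$, and since $(s/t)=(1+|y|^2)^{-1/2}$, the ratio $(s/t)|_{\gamma(\tau)}/(s/t)|_{\gamma(s)}$ is $1+O(\eps_s)$. For $\eps_s$ sufficiently small this multiplicative variation is strictly below $10/9$, proving \eqref{eq5-19-june-2025}; simultaneously, it keeps $(s/t)$ strictly below $\sqrt{7}/4$ (since the initial value is $\leq 3/5$, which lies strictly below $(9/10)\sqrt{7}/4$), keeping the curve in $\{r\geq 3t/4\}$ and closing the bootstrap. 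When $s_0=2$, one checks that $(s/t)\geq 4/5$ everywhere on $\Mcal^{\Hcal}_{s_0}$ (the corner is at $(t,r)=(5/2,3/2)$), so the backward curve cannot cross the initial slice and must exit through the conic boundary $\{r=t-1\}$.

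The integral bound \eqref{eq6-19-june-2025} then follows essentially by bookkeeping. Writing $\zeta_0:=(s/t)|_{(t,x)}$, the near-constancy just established gives $(s/t)|_{\gamma(\tau)}^a\lesssim \zeta_0^a$ along the curve. At the exit point $\tau=s^*_{t,x}$ one has $r=t-1$, hence $s^*_{t,x}=\sqrt{2t_*-1}$ and $(s/t)|_{\gamma(s^*)}\sim 2/s^*_{t,x}$, which combined with the near-constancy yields $s^*_{t,x}\sim 2/\zeta_0$. Plugging the assumed bound on $|f|$ into the integral and using $b>0$,
\begin{equation*}
\int_{s^*_{t,x}}^{s}|f|_{\gamma(\tau)}\,d\tau\lesssim C_f\,b\,\zeta_0^a\int_{s^*_{t,x}}^s\tau^{-1-b}\,d\tau\lesssim C_f\,\zeta_0^a\,(s^*_{t,x})^{-b}\lesssim C_f\,\zeta_0^{a+b},
\end{equation*}
which is the claimed inequality. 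The principal difficulty in the whole argument is closing the bootstrap with tight enough constants; the remaining steps reduce to routine ODE integration and elementary bookkeeping.
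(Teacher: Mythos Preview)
Your bootstrap step contains a fatal direction error. You claim that in $\{r\geq 3t/4\}\cap\Mcal^{\Hcal}_{[s_0,s_1]}$ one has $\zeta\geq\sqrt 7/4$ and $\zetab\gtrsim 1$, but the opposite holds: in the hyperboloidal domain $\zeta=s/t$, and $r\geq 3t/4$ is precisely the condition $(s/t)\leq\sqrt 7/4$. Moreover, \eqref{eq-USA-condition} supplies no quantitative lower bound on $|H^{\Ncal00}|$, so one may well have $\zetab\sim\zeta=s/t$, making $\zeta\zetab^2$ as small as $(s/t)^3$. Your (correct) bound $|\bar\delta^a|\lesssim|H|/(\zeta\zetab^2)$ then becomes $\lesssim |H|(s/t)^{-3}$; along a curve with $(s/t)\sim\zeta_0$ this integrates to $|y(\tau)-y(s)|\lesssim\eps_s\zeta_0^{-3}$, and the resulting multiplicative variation of $(s/t)$ is $1+O(\eps_s\zeta_0^{-2})$ --- not uniformly small for base points near the light cone, where $\zeta_0$ can be arbitrarily small.

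The missing idea is the \emph{sign} supplied by the light-bending part of \eqref{eq-USA-condition}. In the decomposition \eqref{eq2-27-june-2025}, the only term losing $\zeta^{-3}$ is $\dfrac{H^{\Ncal00}}{\zeta^2-H^{\Ncal00}}\betab^a_\eta$, and since $H^{\Ncal00}<0$ while $\betab^a_\eta$ is parallel to $-x^a$, its radial projection $(x^a/\tau)\bar\delta^a$ is nonnegative. Equivalently, the paper computes $\vec L\big((s/t)^2\big)$ directly and shows that the principal piece $(r/t)^2(1-\lapsb^2)/\lambda$ is nonnegative modulo a genuine $O(|H|)$ remainder. The large term therefore drives $(s/t)$ \emph{downward} when going backward in $\tau$ and never needs to be bounded in absolute value; only the $O(|H|)$ remainder must be integrated, and \eqref{eq-KG-decay-condition} makes that uniformly $O(\eps_s)$. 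Once this sign argument is in place, your treatment of \eqref{eq6-19-june-2025} goes through essentially as written.
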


%-------------------------------------------------------------

\begin{proof} 
\bse
A direct calculation shows that
\be
\vec{L}(r/t) = - \frac{sr}{t^3} - \frac{s^2}{t^3}\frac{x^b}{r}\betab^b
\ee
and, consequently thanks to \eqref{eq1-13-aout-2025}, 
\be
\aligned
\vec{L}(r/t) & =  - \frac{sr}{t^3} - \frac{s^2}{t^3}\frac{x^b}{r}\lapsb^2\gb^{0b} 
= - \frac{sr}{t^3} - \frac{s^2}{t^3}\frac{x^b}{r}\lapsb^2\Big(- \frac{x^b}{s} + \Hb^{0b}\Big) 
= \frac{sr}{t^3}(\lapsb^2-1) - \frac{s^2x^b}{t^3r}\lapsb^2\Hb^{0b}. 
\endaligned
\ee
On the other hand, we observe that, thanks to \eqref{eq3-13-june-2025},
\be
\aligned
\vec{L}\big((s/t)^2\big) & =  - \vec{L}\big((r/t)^2\big) = -2(r/t)\vec{L}(r/t)
\\
& =  -2s^{-1}(s/t)^2(r/t)^2(\lapsb^2-1) + 2s^{-1}(s/t)^2\lapsb^2\Hu^{0b}(x^b/t). 
\endaligned
\ee
By integrating this latter identity along the curve $\gamma_{t,x}$, from the time $\tau$ to the tile $s$ with $s_0\leq\tau\leq s$, we obtain
\begin{equation}\label{eq5-13-june-2025}
\ln\Big(\frac{(s/t)^2|_{\gamma_{t,x}(s)}}{(s/t)^2|_{\gamma_{t,x}(\tau)}}\Big) 
= 2\int_{\tau}^s\frac{(r/t)^2(1- \lapsb^2)|_{\gamma_{t,x}(\lambda)}}{\lambda} \diff \lambda
+2\int_{\tau}^s \frac{\lapsb^2\Hu^{0b}(x^b/t)|_{\gamma_{t,x}(\lambda)}}{\lambda}\diff \lambda.
\end{equation}

It convenient to distinguish between two cases. 

\bei 

\item[$\bullet$] Assume first that $(t,x)\in\Mcal^{\frac{3}{5}}_{[s_0,s_1]}$. When $\gamma_{t,x}([\tau,s])\in\Mcal^{\frac{\sqrt{7}}{4}}_{[s_0,s_1]} = \{r\geq 3t/4\}\cap\Mcal^{\Hcal}_{[s_0,s_1]}$, we obserce that, in the right-hand side of the above expression and thanks to the light-bending assumption $H^{\Ncal00}>0$,  the first term can be written as
\be
\aligned
& \int_{\tau}^s
\bigg(\frac{-(s/t)^{-2}(r/t)^2H^{\Ncal00}}{1-(s/t)^{-2}H^{\Ncal00} - R[H]} - \frac{(r/t)^2R[H]}{1-(s/t)^{-2}H^{\Ncal00} - R[H]}\bigg)_{\gamma_{t,x}(\lambda)}\lambda^{-1}\diff \lambda
\\
& \geq - C\int_{\tau}^s|H|_{\gamma_{t,x}(\lambda)}\lambda^{-1}\diff\lambda.
\endaligned
\ee
Thanks to \eqref{eq-KG-decay-condition}, this leads us to
\begin{equation}\label{eq6-13-june-2025}
\ln\Big(\frac{(s/t)^2|_{\gamma_{t,x}(\tau)}}{(s/t)^2|_{\gamma_{t,x}(s)}}\Big) 
\leq C\int_{s_0}^{+\infty}|H|_{\gamma_{t,x}(\lambda)}\lambda^{-1}\diff \lambda\lesssim \eps_s.
\end{equation}
Provided $\eps_s$ sufficiently small, we have 
\be
(s/t)_{\gamma_{t,x}(\tau)}<\frac{\sqrt{7}}{4}.
\ee
Therefore, by a continuity argument, we can ensure that $\gamma_{t,x}(\tau)\in \Mcal^{\frac{\sqrt{7}}{4}}_{[s_0,s_1]}$ when $\tau\in[s_{t,x}^*,s]$. The estimate \eqref{eq5-19-june-2025} is guaranteed also by \eqref{eq6-13-june-2025}.

When $s_0=2$, we have $\theta_0 = 4/5> \sqrt{7}/4$, therefore $(t^*,x^*) := \gamma(s_{t,x}^*)\in\{r-1\}$. Simultaneously, we have $s_{t,x}^*/t^*\leq (10/9)(s/t)$ and, therfore, 
\begin{equation}
(s/t)\lesssim s_{t,x}^*.
\end{equation}
Thanks to \eqref{eq5-19-june-2025}, the estimate \eqref{eq6-19-june-2025} holds true thanks to the following calculation:
\be
\int_{s_{t,x}^*}^s|f|_{t,x}(\lambda)\diff \lambda
\lesssim 
\int_{s_{t,x}^*}^{+\infty}C_fb(s/t)^a|_{\gamma_{t,x}(\lambda)}\lambda^{-1-b}\diff\lambda
\lesssim (s/t)^aC_f(s_{t,x}^*)^{-b}
\lesssim C_f(s/t)^{a+b}.
\ee

\item[$\bullet$] Next, for a general point $(t,x)\in\Mcal^{\Hcal}_{[s_0,s_1]}$, we only need to point out that when $(t,x)\in\{r\leq 4t/5\}$, we have $(s/t)\geq 3/5$ and, therefore,  \eqref{eq6-19-june-2025} becomes trivial.
\eei
\ese
\end{proof}

}

%------------------------------------------------------------------------------------------------------------------------------

\subsection{ Integration argument and conclusion}
\label{section===10-4}

{ 

Now we are ready to establish the main conclusion of the present section.

\begin{proposition}[Pointwise estimate for massive Dirac fields on a uniform spacelike spacetime]
\label{prop2-14-aout-2025}
Assume that the uniform spacelike condition \eqref{eq-USA-condition} holds with a sufficiently small $\eps_s$. For any sufficiently regular solution $\Psi$ to the massive Dirac equation
\begin{equation}\label{eq1'-24-july-2025}
\opDirac\Psi + \mathrm{i}M\Psi = \Phi, 
\end{equation}
one has 
\begin{equation}\label{eq8-26-aout-2025}
\aligned
&s^{3/2}\Big((s/t)\sum_{\alpha}|\widehat{\del_\alpha}\Psi|_{\vec{n}} + \lapsb M|\Psi|_{\vec{n}}\Big)(t,x)
\\
& \lesssim_M 
\Big(\big(\big|\nabla_{\vec{L}}(s^{3/2}\Psi)\big|_{\vec{n}}+ \lapsb M|s^{3/2}\Psi|_{\vec{n}}\Big)_{\gamma_{t,x}(s_{t,x}^*)} 
+\int_{s_{t,x}^*}^s |F|_{\vec{n},t,x}(\lambda) \diff \lambda
\\
& \quad+s^{1/2}\lapsb\big((s/t)[\Psi]_{1,1} + \lapsb|\Psi|_{\vec{n}}\big)
+ s^{1/2}[\Psi]_{1,1}
+ s^{3/2}\zetab^{-1}\lapsb|\del H||\Psi|_{\vec{n}}, 
\endaligned
\end{equation}
where $\mathcal{I}_{[s_0,s_1]} = \Mcal^{\Hcal}_{s_0}\cup\{r=t-1|s_0\leq s\leq s_1\}$ 
and
\begin{equation}
|F|_{\vec{n}}\leq s^{3/2}\lapsb^2\big|\opDirac\Phi - \mathrm{i}M\Phi\big|_{\vec{n}}
+ s^{3/2}\lapsb^2|\nabla_W\Psi|_{\vec{n}} + s^{3/2}\lapsb^2\big|R[H,\Psi]\big|_{\vec{n}}
\end{equation}
with
\begin{equation}\label{eq8-22-aout-2025}
\aligned
\big|R[H,\Psi]\big|_{\vec{n}}
& \lesssim 
t^{-2}\zetab^{-2}\big([\Psi]_{2,2} + \zetab^{-1}[\Psi]_{1,1}\big) 
+ |R_g| \, |\Psi|_{\vec{n}}
\\
& \quad+ t^{-1}\zetab^{-2}|H|_1[\Psi]_{1,1}\sum_{k=0}^2\big((s/t)^{-2}|H|\big)^k
+ t^{-1}\zetab^{-3}|\Psi|_{\vec{n}}|H|_1\sum_{k=0}^2\big((s/t)^{-2}|H|\big)^k
\\
& \quad+ \zetab^{-3}|\Psi|_{\vec{n}}|\del H||H|_1\sum_{k=0}^2\big((s/t)^{-2}|H|\big)^k
+ s^{-2}(s/t)^{-2}\zetab^{-1}|\del H||H||\Psi|_{\vec{n}}
\\
& \quad+t^{-1}\zetab^{-3}|\del H|[\Psi]_{1,1} +  t^{-1}\zetab^{-3}|\del H|_1|\Psi|_{\vec{n}} 
+ s^{-2}(s/t)^{-1}\zetab^{-2}|H||\Psi|_{\vec{n}}, 
\endaligned
\end{equation}
and $R_g$ denotes the scalar curvature of the metric $g$.
\end{proposition}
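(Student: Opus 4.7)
The plan is to view \eqref{eq2-26-july-2025} as a transport equation of second order along the integral curves of $\vec{L}$ for the rescaled spinor $u := s^{3/2}\Psi$, and to derive a first-order Gronwall-type inequality for a natural ``Dirac energy'' along each orthogonal curve $\gamma_{t,x}$. Setting
\begin{equation*}
\Fbf := \lapsb^2 s^{3/2}\bigl(S[H,\Phi] - \nabla_W\Psi + R[H,\Psi]\bigr),
\end{equation*}
Proposition~\ref{prop2-23-july-2025} gives $\nabla_{\vec{L}}\nabla_{\vec{L}} u + \lapsb^2 M^2 u = \Fbf$. Since $\vec{L} = \lapsb \, \vec{n}$ is orthogonal to every slice $\Mcal_s$, the natural pointwise norm $|\cdot|_{\vec{n}}$ is the one adapted to this ODE, and the arithmetic is designed so that the would-be ``imaginary'' coupling of the first-derivative with the zero-order term cancels by Hermiticity of the Dirac form.

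\textbf{Dirac energy along $\gamma_{t,x}$.} I would introduce
\begin{equation*}
E(\lambda) := \bigl|\nabla_{\vec{L}} u\bigr|_{\vec{n}}^2\!\bigl(\gamma_{t,x}(\lambda)\bigr) + \lapsb^2 M^2 |u|_{\vec{n}}^2\!\bigl(\gamma_{t,x}(\lambda)\bigr),
\end{equation*}
and compute $\vec{L}(E)$ using Lemma~\ref{lemma-19juillet2025-a} (or directly by expanding). The key cancellation is
\begin{equation*}
2\,\Re\bigl\la \nabla_{\vec{L}}\nabla_{\vec{L}} u, \vec{n}\cdot \nabla_{\vec{L}} u\bigr\ra_{\ourD} + 2\lapsb^2 M^2\,\Re\bigl\la \nabla_{\vec{L}} u, \vec{n}\cdot u\bigr\ra_{\ourD} = 2\,\Re\bigl\la \Fbf, \vec{n}\cdot \nabla_{\vec{L}} u\bigr\ra_{\ourD},
\end{equation*}
after substituting the equation, so that the principal parts vanish. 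The remaining error contributions come from (i) $\nabla_{\vec{L}}\vec{n}$, which is tangential to $\Mcal_s$ and controlled by Proposition~\ref{lem1-11-aout-2025} and by Lemma~\ref{lem1-03-aout-2025}; (ii) $\vec{L}(\lapsb^2)$, controlled by Lemma~\ref{lem1-26-july-2025} in the hyperboloidal region; and (iii) the geometric term $-\tfrac{1}{2}\mathrm{div}(\vec{L})\,|\cdot|_{\vec{n}}^2$ coming from \eqref{eq3-06-june-2025}. Using Lemma~\ref{prop2-04-mai-2025} and Corollary~\ref{cor1-18-aout-2025}, this yields
\begin{equation*}
\bigl|\vec{L}\, E^{1/2}\bigr| \lesssim |\Fbf|_{\vec{n}} + \Bigl(\bigl|\nabla_{\vec{L}}\vec{n}\bigr|_{\vec{n}} + \lapsb^{-1}\vec{L}(\lapsb) + |\mathrm{div}(\vec{L})|\Bigr)\,E^{1/2},
\end{equation*}
where the coefficient on $E^{1/2}$ is of size $O(s^{-1})$ modulo terms integrable in $\lambda$ along $\gamma_{t,x}$.

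\textbf{Integration along the orthogonal curve.} Proposition~\ref{lem2-24-july-2025} guarantees that $\gamma_{t,x}([s^{*}_{t,x},s])$ remains in a region where $(s/t)$ is essentially constant (up to a factor $10/9$) and where all the uniform spacelike bounds are in force. I would therefore apply Gronwall with the coefficient estimate above, using \eqref{eq6-19-june-2025} to absorb the integrable decay terms into a universal constant depending on $M$. This delivers
\begin{equation*}
E(s)^{1/2} \lesssim_M E(s^{*}_{t,x})^{1/2} + \int_{s^{*}_{t,x}}^{s}|\Fbf|_{\vec{n}}\bigl(\gamma_{t,x}(\lambda)\bigr)\,\diff\lambda,
\end{equation*}
where $E(s^{*}_{t,x})^{1/2}$ is exactly the first data term in \eqref{eq8-26-aout-2025}, and $|\Fbf|_{\vec{n}}$ is the integrand $|F|_{\vec{n},t,x}$ in the stated bound.

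\textbf{Reconstruction of $|\widehat{\del_\alpha}\Psi|_{\vec n}$ and main obstacle.} Finally, I would recover the left-hand side of \eqref{eq8-26-aout-2025} from $E(s)^{1/2}$. The identity \eqref{eq8-25-july-2025} relates $\nabla_{\vec{L}}u$ to $s^{3/2}(s/t)\widehat{\del_t}\Psi$ up to remainders controlled by $s^{1/2}[\Psi]_{1,1}$, $s^{3/2}\zetab^{-1}\lapsb|\del H||\Psi|_{\vec n}$ and zeroth-order pieces, which are precisely the last three lines of \eqref{eq8-26-aout-2025}. The spatial derivatives $\widehat{\del_a}\Psi$ are recovered through $\del_a = t^{-1}L_a - (x^a/t)\del_t$ and Lemma~\ref{lem1-01-aout-2025}, at the cost of an additional $[\Psi]_{1,1}$-contribution already included in the right-hand side. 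The term $\lapsb M |u|_{\vec{n}}$ is controlled directly from $E^{1/2}$. The main obstacle I anticipate lies in Step~2: ensuring that the $\nabla_{\vec L}\vec n$ and $\vec L(\lapsb)$ error terms, which carry a factor $\zeta^{-\delta}$ near the light cone (cf.~Proposition~\ref{lem1-11-aout-2025} and \eqref{eq3-08-aout-2025}), remain integrable in $\lambda$ along $\gamma_{t,x}$; this is exactly why Proposition~\ref{lem2-24-july-2025} is needed to pin $(s/t)$ along the curve, and why the homogeneity accounting in \eqref{eq6-19-june-2025} is essential to close Gronwall with a constant that depends only on $M$ and on the smallness parameter $\eps_s$.
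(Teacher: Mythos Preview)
Your overall strategy is correct and matches the paper's: reduce to the second-order transport equation from Proposition~\ref{prop2-23-july-2025}, differentiate the energy $E=|\nabla_{\vec L}u|_{\vec n}^2+\lapsb^2M^2|u|_{\vec n}^2$ along $\vec L$, integrate, and then reconstruct $(s/t)|\widehat{\del_\alpha}\Psi|_{\vec n}$ via \eqref{eq8-25-july-2025} and the relation $\del_a=t^{-1}L_a-(x^a/t)\del_t$.

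There are, however, two corrections worth noting. First, item (iii) in your list of error sources is spurious: the paper differentiates $\la\,\cdot\,,\vec n\cdot\,\cdot\,\ra_{\ourD}$ using the compatibility property $\Lcal_X\la\Phi,\Psi\ra_{\ourD}=\la\nabla_X\Phi,\Psi\ra_{\ourD}+\la\Phi,\nabla_X\Psi\ra_{\ourD}$ directly, not via Lemma~\ref{lemma-19juillet2025-a} with $\widehat X$, so no $\mathrm{div}(\vec L)$ term appears. The only geometric errors are those in $G=\la\nabla_{\vec L}u,\nabla_{\vec L}\vec n\cdot\nabla_{\vec L}u\ra_{\ourD}+\la u,\nabla_{\vec L}(c^2\vec n)\cdot u\ra_{\ourD}$. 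Second, the correct reference for bounding $\nabla_{\vec L}\vec n$ is Lemma~\ref{lem2-25-july-2025} (proved right after the proposition, via the identity $\nabla_{\vec n}\vec n=\lapsb^{-1}\delb_a\lapsb\,\sigmab^{ab}\delb_b$), not Proposition~\ref{lem1-11-aout-2025}, which concerns $\nabla_{\delb_r}\vec n$ in the merging--Euclidean region.

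The more substantive difference concerns how you handle $G$. You propose to treat $|\nabla_{\vec L}\vec n|_{\vec n}$ and $\lapsb^{-1}\vec L(\lapsb)$ as Gronwall coefficients and then invoke Proposition~\ref{lem2-24-july-2025} and \eqref{eq6-19-june-2025} to make them integrable. The paper takes a simpler route that completely sidesteps the obstacle you identified: it bounds $|G|\lesssim G_0\cdot\lapsb|s^{3/2}\Psi|_{\vec n}+G_1\cdot|\nabla_{\vec L}(s^{3/2}\Psi)|_{\vec n}$ with explicit $G_0,G_1$ coming from Lemma~\ref{lem2-25-july-2025}, and then observes that each of $G_0,G_1$ is dominated by a term already present in $s^{3/2}\lapsb^2|R[H,\Psi]|_{\vec n}$ (specifically the $t^{-1}\zetab^{-3}|H|_1|\Psi|_{\vec n}$ and $t^{-1}\zetab^{-2}|H|_1[\Psi]_{1,1}$ contributions). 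Hence $G$ is absorbed into $|F|_{\vec n}$ and one obtains $|\vec L E^{1/2}|\lesssim|F|_{\vec n}$ directly, with no Gronwall exponential at all. Under the bare hypothesis \eqref{eq-USA-condition} (which imposes no decay on $|H|_1$), this absorption trick is what makes the estimate close with a constant depending only on $M$; your Gronwall coefficient would carry an uncontrolled $|H|_1$ factor in the exponential.
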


\begin{proof} {\bf 1. Derivation of the ODE.} 
\bse
For the simplicity in the calculations, we set $c^2(t,x) := \lapsb^2M^2$. 
We rewrite \eqref{eq2-26-july-2025} in the form
\be
\aligned
& \nabla_{\vec{L}}(\nabla_{\vec{L}}(s^{3/2}\Psi))  + c^2(s^{3/2}\Psi) = F, 
\\
& F:=\lapsb^2s^{3/2}\big(S[H,\Phi] + R[H,\Psi] - \nabla_W\Psi\big). 
\endaligned
\ee
Multiplying this above equation by $\vec{n}$ and then taking the Dirac form with $\nabla_{\vec{L}}(s^{3/2}\Psi)$, we obtain
\be
\la \nabla_{\vec{L}}(s^{3/2}\Psi),\vec{n}\cdot\nabla_{\vec{L}}(\nabla_{\vec{L}}(s^{3/2}\Psi))\ra_{\ourD}
+ c^2\la \nabla_{\vec{L}}(s^{3/2}\Psi),\vec{n}\cdot(s^{3/2}\Psi)\ra_{\ourD}
= \la \nabla_{\vec{L}}(s^{3/2}\Psi) ,\vec{n}\cdot F\ra_{\ourD}.
\ee
Taking then the Hermitian transpose of this later equation, we find 
\be
\la \nabla_{\vec{L}}(\nabla_{\vec{L}}(s^{3/2}\Psi)),\vec{n}\cdot\nabla_{\vec{L}}(s^{3/2}\Psi)\ra_{\ourD}
+ c^2\la (s^{3/2}\Psi),\vec{n}\cdot\nabla_{\vec{L}}(s^{3/2}\Psi)\ra_{\ourD}
= \la  F,\vec{n}\cdot\nabla_{\vec{n}}(s^{3/2}\Psi) \ra_{\ourD}.
\ee
Summing-up the two equations, we thus obtain
\begin{equation}\label{eq3-24-july-2025}
\aligned
& \vec{L}\big(|\nabla_{\vec{L}}(s^{3/2}\Psi)|_{\vec{n}}^2\big) 
+\vec{L}( c^2|(s^{3/2}\Psi)|_{\vec{n}}^2) 
\\
& = 2\Re(\la\nabla_{\vec{L}}(s^{3/2}\Psi),\vec{n}\cdot F \ra_{\ourD})
\\
& \quad+ \la\nabla_{\vec{L}}(s^{3/2}\Psi),\nabla_{\vec{L}}\vec{n}\cdot \nabla_{\vec{L}}(s^{3/2}\Psi) \ra_{\ourD}
+ \la(s^{3/2}\Psi),\nabla_{\vec{L}}(c^2\vec{n})\cdot(s^{3/2}\Psi) \ra_{\ourD}.
\endaligned
\end{equation}
We now rely on the family of orthogonal curves $\gamma_{t,x}$, that is, the integral curves of $\vec{L}$. For any scalar-valued function $v$ defined in (a subset of) $\Mcal^{\Hcal}_{[s_0,s_1]}$, we define $v_{t,x}(\lambda) :=  v\circ\gamma_{t,x}(\lambda)$. With this notation, \eqref{eq3-24-july-2025} becomes an ODE for the functions 
\be
v = \big|\nabla_{\vec{L}}(s^{3/2}\Psi)\big|_{\vec{n}}^2, 
\quad 
w:=|(s^{3/2}\Psi)|_{\vec{n}}^2,
\ee
namely 
\begin{equation}\label{eq4-25-july-2025}
\frac{\diff}{\diff\lambda}\big(v_{t,x} + c^2w_{t,x}\big)(\lambda)
= 2\Re(\la\nabla_{\vec{L}}(s^{3/2}\Psi),\vec{n}\cdot F \ra_{\ourD})\big|_{t,x}(\lambda) 
+ G_{t,x}(\lambda). 
\end{equation}
Here, we have 
\be
G = G[H,\Psi]:=\la\nabla_{\vec{L}}(s^{3/2}\Psi),\nabla_{\vec{L}}\vec{n}\cdot \nabla_{\vec{L}}(s^{3/2}\Psi) \ra_{\ourD}
+ \la(s^{3/2}\Psi),\nabla_{\vec{L}}(c^2\vec{n})\cdot(s^{3/2}\Psi) \ra_{\ourD}.
\ee
The term $G$ is bounded by Lemma~\ref{lem2-25-july-2025}, which is stated and proven next, in combination with \eqref{eq8-25-july-2025}:
\be
\aligned
\big|\la s^{3/2}\Psi,\nabla_{\vec{L}}(c^2\vec{n})\cdot(s^{3/2}\Psi)\ra_{\ourD}\big|
& \lesssim s^{1/2}(s/t)^{-1}\zetab^{-1}\lapsb^5|H|_1|\Psi|_{\vec{n}}\,\big|s^{3/2}\Psi\big|_{\vec{n}}
\\
& = s^{3/2}\lapsb\big(t^{-1}(s/t)^{-2}\zetab^{-1}\lapsb^4 |H|_1|\Psi|_{\vec{n}}\big)\,\big|s^{3/2}\Psi\big|_{\vec{n}}
=:  G_0\lapsb|s^{3/2}\Psi|_{\vec{n}}.
\endaligned
\ee
Thanks to Claim~\ref{cor1-16-june-2025}, we have
\begin{equation}\label{eq1-14-aout-2025}
\lapsb\lesssim (\zeta/\zetab^{-1})\lesssim \lapsb\lesssim 1,
\end{equation}
therefore 
\be
G_0\lesssim t^{-1}\zetab^{-3}|H|_1|\Psi|_{\vec{n}}.
\ee
In the same manner and thanks to \eqref{eq8-25-july-2025}, we find 
\be
\aligned
& \big|\la\nabla_{\vec{L}}(s^{3/2}\Psi),\nabla_{\vec{L}}\vec{n}\cdot\nabla_{\vec{L}}(s^{3/2}\Psi)\big\ra_{\ourD}\big|
\\
& \lesssim s^{1/2}(s/t)^{-1}\zetab^{-1}\lapsb^3|H|_1\big|\nabla_{\vec{L}}\Psi\big|_{\vec{n}}
\big|\nabla_{\vec{L}}(s^{3/2}\Psi)\big|_{\vec{n}}
+ s^{-1/2}(s/t)^{-1}\zetab^{-1}\lapsb^3|H|_1
\big|\Psi|_{\vec{n}}|\nabla_{\vec{L}}(s^{3/2}\Psi)\big|_{\vec{n}}
\\
& \lesssim s^{3/2}\big(s^{-1}\zetab^{-1}\lapsb^3|H|_1|\widehat{\del_t}\Psi|_{\vec{n}}
+ s^{-2}\zetab^{-3}\lapsb^3|H|_1|\Psi|_{\vec{n}} 
+s^{-2}\zetab^{-2}\lapsb^3|H|_1[\Psi]_{1,1}\big)\big|\nabla_{\vec{L}}(s^{3/2}\Psi)\big|_{\vec{n}}
\\
& \quad+ s^{3/2}\big(
s^{-1}(s/t)^{-1}\zetab^{-3}\lapsb^3|H|_1|\del H|[\Psi]_{1,1}
+s^{-2}\zetab^{-2}\lapsb^3|H|_1|\Psi|_{\vec{n}}
\big)
\big|\nabla_{\vec{L}}(s^{3/2}\Psi)\big|_{\vec{n}}
\\
& \lesssim s^{3/2}\big(t^{-1}\zetab^{-2}\lapsb^2|H|_1|\Psi|_{\vec{n}} 
+t^{-1}\zetab^{-3}\lapsb^{3}|H|_1|\Psi|_{\vec{n}}
+ t^{-1}\zetab^{-2}\lapsb^3|H|_1[\Psi]_{1,1}
\big)|\nabla_{\vec{L}}(s^{3/2}\Psi)\big|_{\vec{n}}
\\
& \quad + s^{3/2}\big(
\zetab^{-3}\lapsb^3|H|_1|\del H||\Psi|_{\vec{n}}
+s^{-2}\zetab^2\lapsb^3|H|_1|\Psi|_{\vec{n}}
\big)|\nabla_{\vec{L}}(s^{3/2}\Psi)\big|_{\vec{n}}, 
\endaligned
\ee
hence
\be
\aligned
& \big|\la\nabla_{\vec{L}}(s^{3/2}\Psi),\nabla_{\vec{L}}\vec{n}\cdot\nabla_{\vec{L}}(s^{3/2}\Psi)\big\ra_{\ourD}\big|
\\
& \lesssim s^{3/2}\big(t^{-1}\zetab^{-3}|H|_1||\Psi|_{\vec{n}} 
+ t^{-1}\zetab^{-2}|H|_1[\Psi]_{1,1} + \zetab^{-3}\lapsb^2|H|_1|\del H||\Psi|_{\vec{n}}
\big)|\nabla_{\vec{L}}(s^{3/2}\Psi)\big|_{\vec{n}}
\\
& =: G_1 \big|\nabla_{\vec{L}}(s^{3/2}\Psi)\big|_{\vec{n}}.
\endaligned
\ee
We thus rewrite \eqref{eq4-25-july-2025} into the form
\begin{equation}
\Big|\frac{\diff}{\diff \lambda}\big(v_{t,x} + cw_{t,x}\big)^{1/2}\Big| 
\leq \big(|F|_{\vec{n}}\big)_{t,x}(\lambda) +  s^{3/2}(G_1+G_0)\big|_{t,x}(\lambda).
\end{equation}
Observe that the second term in the right-hand side of the above equation can be absorbed (thanks to \eqref{eq5-25-july-2025}) into the terms $s^{3/2}R[H,\Psi]$. We thus arrive at
\begin{equation}\label{eq3-27-july-2025}
\Big|\frac{\diff}{\diff \lambda}\big(v_{t,x} + cw_{t,x}\big)^{1/2}\Big| 
\lesssim \big(|F|_{\vec{n}}\big)_{t,x}(\lambda). 
\end{equation}
\ese
%

%----------------------------------------------------------------------------

\vskip.3cm

\bse
\noindent{\bf 2. Integration argument.} Next, for any given point $(t,x)\in\Mcal^{\Hcal}_{[s_0,s_1]}$, we integrate \eqref{eq3-27-july-2025} along $\gamma_{t,x}$ on the interval $[s_{t,x}^*,s]$ and obtain
\begin{equation}\label{eq4-27-julyt-2025}
\aligned
\big(v_{t,x} + cw_{t,x}\big)^{1/2}(s) 
& =  \big(v_{t,x} + cw_{t,x}\big)^{1/2}(s_{t,x}^*) 
+\int_{s_{t,x}^*}^s |F|_{\vec{n},t,x}(\lambda) \,\diff \lambda, 
\endaligned
\end{equation}
which leads us to
\begin{equation}\label{eq6-27-july-2025}
\aligned
& \big(\big|\nabla_{\vec{L}}(s^{3/2}\Psi)\big|_{\vec{n}} + \lapsb M|s^{3/2}\Psi|_{\vec{n}}\big)(t,x)
\\
& \lesssim_M 
\sup_{\mathcal{I}_{[s_0,s_1]}}
\Big(\big(\big|\nabla_{\vec{L}}(s^{3/2}\Psi)\big|_{\vec{n}}+ \lapsb M|s^{3/2}\Psi|_{\vec{n}}\Big) 
+ \int_{s_{t,x}^*}^s |F|_{\vec{n},t,x}(\lambda) \,\diff \lambda. 
\endaligned
\end{equation}
On the other hand, thanks to \eqref{eq8-25-july-2025}, we have 
\be
\aligned
\big|\nabla_{\vec{L}}(s^{3/2}\Psi) - s^{3/2}(s/t)\widehat{\del_t}\Psi\big|_{\vec{n}} 
& \lesssim   s^{1/2}|\Psi|_{\vec{n}} + s^{1/2}(s/t)\lapsb\big([\Psi]_{1,1} + \zetab^{-1}|\Psi|_{\vec{n}}\big)
\\
& \quad +s^{3/2}\zetab^{-1}\lapsb|\del H||\Psi|_{\vec{n}}, 
\endaligned
\ee
where we used $(s/t)\zetab^{-1}\lesssim \lapsb$. Since 
\be
\widehat{\del_a}\Psi 
= - \frac{x^a}{t}\widehat{\del_t}\Psi + t^{-1}\widehat{L_a}\Psi 
+ \frac{1}{4}g^{\mu\nu}\del_{\mu}\cdot\Big(t^{-1}\nabla_{\nu}L_a - \frac{x^a}{t}\nabla_{\nu}\del_t \Big)\cdot\Psi,
\ee
it follows that  
\begin{equation}
\aligned
s^{3/2}(s/t)\big|\widehat{\del_a}\Psi\big|_{\vec{n}}
& \lesssim s^{3/2}(s/t)\big|\widehat{\del_t}\Psi\big|_{\vec{n}} + s^{1/2}(s/t)^2[\Psi]_{1,1}
\\
& \quad 
+ s^{1/2}(s/t)\lapsb|\Psi|_{\vec{n}}
+ s^{3/2}\lapsb|\del H||\Psi|_{\vec{n}}.
\endaligned
\end{equation}
This leads us to
\begin{equation}\label{eq5-27-july-2025}
\aligned
& s^{3/2}\Big((s/t)\sum_{\alpha}|\widehat{\del_\alpha}\Psi|_{\vec{n}} + \lapsb M|\Psi|_{\vec{n}}\Big)
\\
& \lesssim \big(\big|\nabla_{\vec{n}}(s^{3/2}\Psi)\big|_{\vec{n}} + \lapsb M|s^{3/2}\Psi|_{\vec{n}}\big) 
+s^{1/2}\lapsb\big((s/t)[\Psi]_{1,1} 
\\
& \quad + \lapsb|\Psi|_{\vec{n}}\big)
+ s^{1/2}[\Psi]_{1,1} + s^{3/2}\zetab^{-1}\lapsb|\del H||\Psi|_{\vec{n}}.
\endaligned
\end{equation}
Together with \eqref{eq6-27-july-2025}, this provides us with the desired estimate, provided we also check Lemma \ref{lem2-25-july-2025}, stated next. 
\ese
\end{proof}

\begin{lemma}\label{lem2-25-july-2025}
Assume that the uniform spacelike condition \eqref{eq-USA-condition} holds for some sufficiently small $\eps_*$. Then one has 
\begin{equation}\label{eq5-26-july-2025}
\la\Psi,\nabla_{\vec{L}}\vec{n}\cdot\Psi \ra_{\ourD}\lesssim s^{-1}(s/t)^{-1}\zetab^{-1}\lapsb^3|H|_1|\Psi|_{\vec{n}}^2,
\end{equation}
\begin{equation}\label{eq1-27-july-2025}
\la\Psi,\nabla_{\vec{L}}(l\vec{L})\cdot\Psi \ra_{\ourD}\lesssim s^{-1}(s/t)^{-1}\zetab^{-1}\lapsb^5|H|_1|\Psi|_{\vec{n}}^2. 
\end{equation}
\end{lemma}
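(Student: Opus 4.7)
The plan is to reduce both inequalities to pointwise bounds on vector fields measured in the $|\cdot|_{\vec n}$ norm. By Lemma~\ref{lem1-06-april-2025}, for any vector $X$ and spinor $\Psi$ we have $|\la \Psi, X\cdot\Psi\ra_{\ourD}|\lesssim |X|_{\vec n}|\Psi|_{\vec n}^2$, so proving \eqref{eq5-26-july-2025} reduces to establishing
\[
|\nabla_{\vec L}\vec n|_{\vec n}\lesssim s^{-1}(s/t)^{-1}\zetab^{-1}\lapsb^3|H|_1,
\]
and similarly for \eqref{eq1-27-july-2025}. This is the main linear-algebraic reduction; the rest is geometric bookkeeping.

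For \eqref{eq5-26-july-2025}, since $\vec L=\lapsb\vec n$, I write $\nabla_{\vec L}\vec n=\lapsb\nabla_{\vec n}\vec n$. The standard identity in a Gauss--type frame gives $\nabla_{\vec n}\vec n=\sigmab^{ab}(\delb_a\ln\lapsb)\delb_b$, a vector tangent to $\Mcal_s$. I then invoke Claim~\ref{cla1-16-june-2025} (which yields $|\sigmab^{ab}\delb_b|_{\vec n}\lesssim\zetab^{-1}$) and the identity $\delb_a=t^{-1}L_a$ valid in $\Mcal^{\Hcal}_{[s_0,s_1]}$, so that $|\delb_a\lapsb|=t^{-1}|L_a\lapsb|$ is bounded via Lemma~\ref{lem1-26-july-2025} (or the finer Lemma~\ref{lem1-08-aout-2025}) by $t^{-1}(s/t)^{-2}\lapsb^{3}|H|_1$. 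Combining with the conversion $\lapsb\sim(\zetab/\zeta)\lapsb_\eta$ from Claim~\ref{cor1-16-june-2025} and the identity $s^{-1}(s/t)^{-1}=t^{-1}(s/t)^{-2}$, all powers of $\zetab$, $(s/t)$ and $\lapsb$ match up.

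For \eqref{eq1-27-july-2025}, I interpret the quantity $\nabla_{\vec L}(\lapsb\vec L)=\nabla_{\vec L}(\lapsb^2\vec n)$ and apply the Leibniz rule:
\[
\nabla_{\vec L}(\lapsb^2\vec n)=\vec L(\lapsb^2)\,\vec n+\lapsb^2\nabla_{\vec L}\vec n.
\]
The second term is already bounded by $\lapsb^2$ times the estimate just obtained, producing exactly the $\lapsb^5$ weight. For the first term, $|\vec n|_{\vec n}=1$, so I only need to control $\vec L(\lapsb^2)=2\lapsb\vec L(\lapsb)$. Expanding $\vec L=\delb_0-\betab^a\delb_a$ and using the shift bound $|\betab^a|\lesssim\zetab^{-2}J$ from Claim~\ref{clm-1-29-june-2025} together with the lapse-derivative bounds from Lemmas~\ref{lem1-26-july-2025}--\ref{lem1-08-aout-2025} (applied to both time and boost derivatives), the two contributions $\delb_0\lapsb$ and $\betab^a\delb_a\lapsb$ are each dominated by the same right-hand side, to leading order in $|H|_1$.

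The main technical obstacle I anticipate is the consistent tracking of the $(\zetab,\zeta,\lapsb)$ weights: the relation $\lapsb\sim\zetab\zeta^{-1}$ (via \eqref{eq2-06-aout-2025}) is essential to match the target powers $\zetab^{-1}\lapsb^3$ and $\zetab^{-1}\lapsb^5$ appearing in \eqref{eq5-26-july-2025}--\eqref{eq1-27-july-2025}, and care is required when the shift term $\betab^a\delb_a\lapsb$ contributes, since naively both $\betab^a$ and $\delb_a\lapsb$ carry their own $\zetab$-weights that must be reconciled using the hyperboloidal identity $s^{-1}(s/t)^{-1}=t^{-1}(s/t)^{-2}$ and the lower bound $\zeta=s/t$ in $\Mcal^{\Hcal}_{[s_0,s_1]}$. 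Since both estimates are used to absorb the $G_0$ and $G_1$ terms in \eqref{eq3-24-july-2025}, it suffices to get the above bounds to leading order in $|H|_1$, without needing to track the subleading quadratic corrections.
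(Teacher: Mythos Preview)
Your approach is essentially identical to the paper's: reduce via Lemma~\ref{lem1-06-april-2025}, derive $\nabla_{\vec n}\vec n=\lapsb^{-1}(\delb_a\lapsb)\sigmab^{ab}\delb_b$ (the paper proves this identity from scratch rather than citing it as standard), and combine Claim~\ref{cla1-16-june-2025} with Lemma~\ref{lem1-26-july-2025}; for \eqref{eq1-27-july-2025} the paper writes the same Leibniz decomposition and then stops.

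One caution: you assert that Lemmas~\ref{lem1-26-july-2025}--\ref{lem1-08-aout-2025} control $\delb_0\lapsb$ purely in terms of $|H|_1$, but Lemma~\ref{lem1-26-july-2025} in fact gives $|\del_t\lapsb|\lesssim (s/t)^{-2}\lapsb^3|\del H|+(s/t)^{-3}s^{-1}\lapsb^3|H|$, and Lemma~\ref{lem1-08-aout-2025} covers only $L_a$ and $\Omega_{ab}$, not $\del_t$. So the $\delb_0\lapsb$ contribution to $\vec L(\lapsb^2)$ genuinely involves $|\del H|$, not just $|H|_1$. The paper's own proof is equally terse here (it writes the identity and moves on), and in the application this is harmless since the $G$ terms are ultimately absorbed into $R[H,\Psi]$ (cf.\ \eqref{eq8-22-aout-2025}), which already carries $|\del H|$-terms; but you should not claim the stated $|H|_1$-only bound for the $\vec L(\lapsb^2)\vec n$ piece follows from those two lemmas alone.
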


\begin{proof} 
\bse
In view of $(\nabla_{\vec{n}}\vec{n},\vec{n})_g = \frac{1}{2}\vec{n}\big((\vec{n},\vec{n})_g\big) = 0$, it is clear that the orthogonality property $\nabla_{\vec{n}}\vec{n}\perp \vec{n}$ holds. On the other hand, we have 
\begin{equation}
(\nabla_{\vec{n}}\vec{n},\delb_a)_g = -(\vec{n},\nabla_{\vec{n}}\delb_a)_g 
= (\vec{n},[\delb_a,\vec{n}])_g - (\vec{n},\nabla_{\delb_a}\vec{n})_g = (\vec{n},[\delb_a,\vec{n}])_g.
\end{equation}
A direct calculation also shows that
\be
[\delb_a,\vec{n}] = \delb_a(\lapsb^{-1})(\delb_0 - \beta^b\delb_b) - \delb_a(\betab^b)\delb_b, 
\ee
hence
\be
(\nabla_{\vec{n}}\vec{n},\delb_a)_g = - \lapsb\delb_a\big(\lapsb^{-1}\big)
= \lapsb^{-1}\delb_a\lapsb.
\ee
We thus obtain
\begin{equation}\label{eq3-26-july-2025}
\nabla_{\vec{n}}\vec{n} = \lapsb^{-1}\delb_a\lapsb\,\sigmab^{ab}\delb_b.
\end{equation}
We then apply Lemma~\ref{lem1-26-july-2025} and obtain 
\begin{equation}\label{eq2-27-july-2025}
\big|\nabla_{\vec{n}}\vec{n}\big|_{\vec{n}}\lesssim s^{-1}(s/t)^{-1}\zetab^{-1}\lapsb^2|H|_1.
\end{equation}
Again since $\vec{L} = \lapsb\vec{n}$, we obtain \eqref{eq5-26-july-2025}. 
Finally, to deal with \eqref{eq1-27-july-2025}, we simply observe that
\be
\nabla_{\vec{L}}(\lapsb^{2}\vec{n}) 
= \vec{L}(\lapsb^2)\vec{n} + \lapsb^3\nabla_{\vec{n}}\vec{n}. 
\ee 
\ese
\end{proof}

}

%==============================================================================

\section{High-order operators and commutators}
\label{section=N8}

\subsection{ High-order operators acting on products}
\label{section===11-1}

{

The following proposition gathers product rules at high order for the commuted norms $[\cdot]_{p,k}$ adapted to our Euclidean--hyperboloidal framework. Item~(1) is a Moser-type inequality: expanding $Z^{\le p}\del^{\le k}(u\Psi)$ by Leibniz and summing over all partitions of $(p,k)$ yields \eqref{eq1-17-july-2025}. Item~(2) uses the almost-flatness structure: derivatives adapted to the geometry acting on a vector field $X$ are controlled by $|X|$ up to coefficients depending on $H$, and the smallness $|H|_{[p/2]+1}\le 1$ allows these geometric coefficients to be absorbed; the factor $\zetab^{-1}$ comes from the change of frame. For $X\cdot\Psi$, we iterate Leibniz and commute Clifford multiplications using $\del_\alpha\cdot\del_\beta+\del_\beta\cdot\del_\alpha=-2\eta_{\alpha\beta}$, so that all geometric remainders are organized in terms involving $|H|$. Finally, Item~(3) follows by iterating the argument for $n$ Clifford multiplications and the same commutator scheme, producing a single $\zetab^{-1}$ loss and only polynomial dependence on $|H|$. 
The proof of the estimates below is based on \eqref{eq1-15-july-2025}, together with estimates for high-order Lie and Clifford-adapted derivatives; see Section~\ref{section=N23}.

For the convenience in the notation, we introduce
\begin{equation}
\Cubic_{p,k}[X,Y,Z] :=\sum_{p_1+p_2+p_3\leq p\atop k_1+k_2+k_3\leq k}|X|_{p_1,k_1}|Y|_{p_2,k_2}|Z|_{p_3,k_3}
\end{equation}
\begin{equation}
\Quart_{p,k}[X_1,X_2,X_3,X_4] = \sum_{p_1+p_2+p_3+p_4\leq p\atop k_1+k_2+k_3+k_4\leq k}
|X_1|_{p_1,k_1}|X_2|_{p_2,k_2}|X_3|_{p_3,k_3}|X_4|_{p_4,k_4}, 
\end{equation}
where $X_i,Y,Z$ denote scalars or vectors. When $X$ is a spinor, $|X|_{p,k}$ is understood as $[X]_{p,k}$, and we set 
\be
\Cubic_p[X,Y,Z] := \Cubic_{p,p}[X,Y,Z],
\ee
and analogue for $\Quart_{p}$.

\begin{proposition}[High-order estimates on products.]
\label{prop1-15-july-2025}
Suppose that the condition \eqref{eq-USA-condition} holds for some sufficiently small $\eps_s$.
Consider a spinor field $\Psi$ defined in (a subset of) $\Mcal_{[s_0,s_1]}$, together with a real-valued field $u$, and a vector field $X$. Let $p,k\ge 0$ be integers. 
\bei

\item[1.] If the function $u$ satisfies 
\begin{equation}\label{eq-scal-condition}
|u|_{[p/2]}\leq 1, 
\end{equation}
then one has 
\begin{equation}\label{eq1-17-july-2025}
[u\Psi]_{p,k}\lesssim_p \sum_{p_1+p_2\leq p\atop k_1+k_2\leq k}|u|_{p_1,k_1}[\Psi]_{p_2,k_2}.
\end{equation}

\item[2.] If, furthermore, the metric satisfies 
\begin{equation}\label{eq-vect-condition}
|H|_{[p/2]+1}\leq 1,
\end{equation}
then one has 
\bse
\begin{equation}\label{eq5-16-july-2025}
[X]_{p,k}\lesssim_p \zetab^{-1}|X|_{p,k} 
+ \zetab^{-1}\sum_{p_1+p_2\leq p\atop k_1+k_2\leq k}|X|_{p_1,k_1}|H|_{p_2,k_2}
\end{equation}
and, in addition, 
\begin{equation}\label{eq3-16-july-2025}
[X\cdot\Psi]_{p,k}\lesssim_p\sum_{p_1+p_2\leq p\atop k_1+k_2\leq k}\zetab^{-1}[\Psi]_{p_1,k_1}|X|_{p_2,k_2}
+\sum_{p_1+p_2+p_3\leq p\atop k_1+k_2+k_3\leq k,p_3\geq 1}\zetab^{-1}[\Psi]_{p_1,k_1}|X|_{p_2,k_2}|H|_{p_3,k_3}.
\end{equation}
\ese

\item[3.] Under the condition \eqref{eq-vect-condition}, for any integer $n\ge 1$ the following estimate also holds:
\begin{equation}\label{eq3-23-july-2025}
\big[\del_{\alpha_1}\cdot\del_{\alpha_2}\cdots\cdot\del_{\alpha_n}\cdot\Psi\big]_{p,k}
\lesssim_{n,p}\sum_{p_1+p_2\leq p\atop k_1+k_2\leq k}
\zetab^{-1}[\Psi]_{p_1,k_1}(1+|H|_{p_2,k_2}).
\end{equation}
\eei
\end{proposition}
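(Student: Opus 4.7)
The strategy is to reduce each of the three estimates to an iteration of the Leibniz rules recorded in \eqref{eq4-04-july-2025} combined with the admissible partition of multi-indices introduced in \eqref{eq9-14-july-2025}, and then to convert component norms into $\vec{n}$-norms using Lemma~\ref{lem2-13-july-2025} and the Clifford-product bounds of Lemma~\ref{lem1-16-july-2025} and Corollary~\ref{cor1-18-july-2025}. The bootstrap-type smallness hypotheses \eqref{eq-scal-condition} and \eqref{eq-vect-condition} will enter only to absorb products of low-order norms of $u$ and $H$ into the constants; they have no structural role in the linear-in-$\Psi$ expansion.

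\textbf{Item (1).} I would expand $\mathscr{Z}^I(u\Psi)$ via the scalar-spinor Leibniz identity in \eqref{eq4-04-july-2025} and induct on $|I|$, using the admissible partition to write
\[
\mathscr{Z}^I(u\Psi) = \sum_{I_1\odot I_2 = I} (\mathscr{Z}^{I_1}u)\,\mathscr{Z}^{I_2}\Psi,
\]
where the coefficients are scalars. Taking the $\vec{n}$-norm and summing over admissible partitions immediately gives \eqref{eq1-17-july-2025}. The hypothesis $|u|_{[p/2]}\leq 1$ is used only to absorb finitely many mid-order products into the implicit constant via a Moser-type split $p_1\leq [p/2]$ versus $p_2\leq [p/2]$.

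\textbf{Item (2).} For \eqref{eq5-16-july-2025}, the key identity is Lemma~\ref{lem4-06-oct-2025(l)}, which gives $ZX = \mathcal{L}_Z X + \tfrac{1}{2}g^{\mu\nu}X^\alpha \pi[Z]_{\alpha\nu}\del_\mu$, so iterating $\widehat{Z}_{i_1}\cdots \widehat{Z}_{i_p}$ produces a finite sum of Lie-iterated derivatives of $X$ plus correction terms with factors of deformation tensors and hence of $\mathscr{Z}^J H$. One then controls the component norm $|\mathscr{Z}^I X|_{p,k}$ by $|X|_{p,k}$ plus cross-products $|X|_{p_1,k_1}|H|_{p_2,k_2}$, and converts to the $\vec{n}$-norm via Lemma~\ref{lem2-13-july-2025}, which contributes the single factor $\zetab^{-1}$. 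For \eqref{eq3-16-july-2025}, I would apply the third Leibniz identity in \eqref{eq4-04-july-2025} to $\mathscr{Z}^I(X\cdot\Psi)$, yielding $\sum_{I_1\odot I_2=I}\mathscr{Z}^{I_1}X\cdot\mathscr{Z}^{I_2}\Psi$, then apply Lemma~\ref{lem1-16-july-2025} to bound each Clifford product pointwise by $|\mathscr{Z}^{I_1}X|_{\vec{n}}\,|\mathscr{Z}^{I_2}\Psi|_{\vec{n}}$, and finally invoke \eqref{eq5-16-july-2025} to re-expand $|\mathscr{Z}^{I_1}X|_{\vec{n}}$.

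\textbf{Item (3) and main obstacle.} A naive iteration of \eqref{eq3-16-july-2025} for $n$ Clifford factors would yield the unacceptable loss $\zetab^{-n}$. This is the main difficulty, and the structural fix is precisely Lemma~\ref{lem1'-18-july-2025}: in the adapted frame $\{\delb_0,\delb_a\}$, the product $\prod_I\del$ decomposes as in \eqref{eq1-23-july-2025} into a sum of terms containing \emph{at most one} factor $\delb_0$, while the tangential $\delb_a$'s are uniformly bounded in $\vec{n}$-norm; only the single $\delb_0$ contributes one factor $\zetab^{-1}$ via Lemma~\ref{lem1-23-july-2025}. I would therefore first apply Leibniz to distribute $\mathscr{Z}^I$ across the product $\del_{\alpha_1}\cdot\ldots\cdot\del_{\alpha_n}\cdot\Psi$, noting that $\widehat{Z}\del_\alpha = \nabla_Z\del_\alpha + (\text{Clifford correction})$ is controlled by $|H|_{p,k}$ after using \eqref{eq2-04-july-2025} and the Christoffel bound \eqref{eq5-31-july-2025}. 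The resulting expansion produces, schematically, a scalar factor of the form $(1+|H|_{p_2,k_2})$ times a product of $\del$'s acting on some $\mathscr{Z}^{I'}\Psi$ of lower order; applying Lemma~\ref{lem1'-18-july-2025} to this reduced product gives exactly one $\zetab^{-1}$ and yields \eqref{eq3-23-july-2025}. The condition \eqref{eq-vect-condition} is again used only to absorb mid-order products of $H$ into constants; inducting on the number $n$ of Clifford factors and on the order $p$ simultaneously closes the estimate.
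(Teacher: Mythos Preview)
Your proposal is correct and follows essentially the same route as the paper. The paper formalizes your inductive iteration of Lemma~\ref{lem4-06-oct-2025(l)} as the decomposition Lemma~\ref{lem1-15-july-2025} (expressing $\mathscr{Z}^I X$ as $\Lcal_{\mathscr{Z}}^I X$ plus polynomial corrections in $\Lcal_{\mathscr{Z}}^K H$), and for Item~3 the paper uses the cleaner observation that the components $(\Lcal_{\mathscr{Z}}^I\del_\alpha)^\delta$ are \emph{constants} (rather than your covariant-derivative viewpoint via $\nabla_Z\del_\alpha$ and Christoffel bounds) to get $|(\mathscr{Z}^I\del_\alpha)^\delta|\lesssim 1+|H|_{p,k}$ directly; but both routes land on the same scalar-times-$\prod\del_\beta\cdot(\text{spinor})$ structure to which Lemma~\ref{lem1'-18-july-2025} is applied once, and the single $\zetab^{-1}$ loss you correctly identified is exactly the mechanism the paper exploits.
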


}

%----------------------------------------------------------------------------------------------------------------

\subsection{ An ordering result}
\label{section===11-2}

{ 

\paragraph{Preliminary technical estimates.}

We collect here auxiliary bounds used repeatedly in the derivation of our high-order estimates. For convenience in the discussion, we introduce a short-hand notation for the first and second (modified) derivatives of spinors, where $\widehat{\del_\alpha}$ denotes the modified derivative from~\eqref{equa-511m}, 
and which is the analogues of the scalar case: 
\bel{equa-notation-spineur}
[\del\Psi]_{p,k} = \max_{\alpha}[\widehat{\del_{\alpha}}\Psi]_{p,k},
\quad
[\del\del\Psi]_{p,k} = \max_{\alpha,\beta}\big[\widehat{\del_{\beta}}\big(\widehat{\del_{\alpha}}\Psi\big)\big]_{p,k}. 
\ee
As a preparation, we then list the deformation tensors of the basic commutation fields $\{\del_\mu,L_d,\Omega_{ab}\}$. Recall that $\pi[Z]_{\alpha\beta}:=(\mathcal{L}_Z g)_{\alpha\beta}$.

\begin{lemma}\label{lem3-06-oct-2025(l)}
When $Z = \del_{\mu}$, one has 
\begin{equation}
\pi[\del_\mu]_{\alpha\beta} = \del_\mu H_{\alpha\beta},
\quad 
\pi[\del_\delta]^{\alpha\beta} = g^{\alpha\mu}g^{\beta\nu}\del_\delta H_{\mu\nu}
\end{equation}
and, when $Z = L_d,\Omega_{ab}$,
\begin{equation}\label{eq4-18-july-2025}
\aligned
\pi[L_d]_{\alpha\beta} 
& = \, L_dH_{\alpha\beta} + \delta_{\alpha0}H_{d\beta} + \delta_{\beta0}H_{\alpha d} 
+ \delta_{\beta d}H_{\alpha 0} + \delta_{\alpha d}H_{0\beta},
\quad
\pi[L_d]^{\alpha\beta} = g^{\alpha\mu}g^{\beta\nu}\pi[L_d]_{\mu\nu}.
\\
\pi[\Omega_{ab}]_{\alpha\beta} & = \, \Omega_{ab}H_{\alpha\beta} 
+ \delta_{\alpha a}H_{b\beta} + \delta_{\beta a}H_{\alpha b}
- \delta_{\alpha b}H_{a\beta} - \delta_{\beta b}H_{\alpha a},
\quad
\pi[\Omega_{ab}]^{\alpha\beta} =  g^{\alpha\mu}g^{\beta\nu}\pi[\Omega_{ab}]_{\mu\nu}.
\endaligned
\end{equation}
\end{lemma}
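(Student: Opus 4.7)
The plan is to proceed directly from the coordinate expression of the Lie derivative, namely \eqref{eq3-22-june-2025} in the equivalent form
\begin{equation*}
\pi[Z]_{\alpha\beta}=Z(g_{\alpha\beta})-g([Z,\del_\alpha],\del_\beta)-g(\del_\alpha,[Z,\del_\beta]),
\end{equation*}
and to split $g=\eta+H$. Since $\eta_{\alpha\beta}$ is constant in the coordinate chart $\{x^{\alpha}\}$, the first term reduces to $Z(H_{\alpha\beta})$, so the entire calculation amounts to computing $[Z,\del_\alpha]$ for each of the three families and contracting with $g=\eta+H$.

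First I would treat $Z=\del_\mu$, where $[\del_\mu,\del_\alpha]=0$ immediately yields $\pi[\del_\mu]_{\alpha\beta}=\del_\mu H_{\alpha\beta}$, and the contravariant form follows by raising indices with $g^{\alpha\mu}g^{\beta\nu}$. Next, for $Z=L_d=t\del_d+x^d\del_t$, I would compute
\begin{equation*}
[L_d,\del_\alpha]=-\delta_{\alpha 0}\del_d-\delta_{\alpha d}\del_0,
\end{equation*}
and substitute to get
\begin{equation*}
\pi[L_d]_{\alpha\beta}=L_d(H_{\alpha\beta})+\delta_{\alpha 0}g_{d\beta}+\delta_{\alpha d}g_{0\beta}+\delta_{\beta 0}g_{\alpha d}+\delta_{\beta d}g_{\alpha 0}.
\end{equation*}
The analogous computation for $Z=\Omega_{ab}$ uses $[\Omega_{ab},\del_\alpha]=-\delta_{\alpha a}\del_b+\delta_{\alpha b}\del_a$ and produces
\begin{equation*}
\pi[\Omega_{ab}]_{\alpha\beta}=\Omega_{ab}(H_{\alpha\beta})+\delta_{\alpha a}g_{b\beta}-\delta_{\alpha b}g_{a\beta}+\delta_{\beta a}g_{\alpha b}-\delta_{\beta b}g_{\alpha a}.
\end{equation*}

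The key observation that collapses these expressions to the stated form is that $L_d$ and $\Omega_{ab}$ are Killing vectors for the Minkowski metric $\eta$, hence $\mathcal{L}_{L_d}\eta=\mathcal{L}_{\Omega_{ab}}\eta=0$. Applied to the decomposition $g=\eta+H$, this identity forces the $\eta$-contributions in the coefficients above to cancel: for instance,
\begin{equation*}
\delta_{\alpha 0}\eta_{d\beta}+\delta_{\alpha d}\eta_{0\beta}+\delta_{\beta 0}\eta_{\alpha d}+\delta_{\beta d}\eta_{\alpha 0}=0,
\end{equation*}
so only the $H$-components survive, producing the formula \eqref{eq4-18-july-2025}; the same cancellation mechanism works for $\Omega_{ab}$. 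The contravariant form in each case is obtained simply by the musical isomorphism with respect to $g$.

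There is no genuine obstacle here; the argument is entirely algebraic. The only point that deserves care is the consistent use of the sign convention for $[Z,\del_\alpha]$ and the bookkeeping of Kronecker symbols when $Z$ mixes time and space components, since the Minkowski lowering of the index on $\del_0$ is what produces the asymmetric appearance of $H_{0\beta}$ versus $H_{a\beta}$ terms in the boost case. Once these signs are tracked, the three formulas, together with the contravariant counterparts, follow in a single display.
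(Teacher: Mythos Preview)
Your proposal is correct and is exactly the standard computation one would perform; the paper itself does not supply a proof for this lemma, treating the formulas as immediate from the coordinate expression of the Lie derivative and the Killing property of $\del_\mu$, $L_d$, $\Omega_{ab}$ with respect to $\eta$ (the same formulas also appear without proof at \eqref{eq7-30-june-2025}). Your bookkeeping of the brackets $[Z,\del_\alpha]$ and the cancellation of the $\eta$-contributions is accurate.
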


We also control the inverse-metric components at high order, as follows. 

\begin{lemma}\label{lem1-11-july-2025}
Assume that \eqref{eq-USA-condition} holds with sufficiently small $\eps_s$. Suppose that \eqref{eq-vect-condition} holds. Then
\begin{equation}\label{eq4-16-july-2025}
|H^{\alpha\beta}|_{p,k}\lesssim_p  |H|_{p,k}.
\end{equation}
\end{lemma}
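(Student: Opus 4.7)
The plan is to exploit the Neumann-type expansion from Lemma~\ref{lem1-16-june-2025}, which is ultimately algebraic in nature: from the defining identity $g^{\alpha\mu}g_{\mu\beta}=\delta^\alpha_\beta$ together with $g^{\alpha\beta}=\eta^{\alpha\beta}+H^{\alpha\beta}$ and $g_{\alpha\beta}=\eta_{\alpha\beta}+H_{\alpha\beta}$, a direct expansion yields the exact identity
\begin{equation}\label{eq-keyHup}
H^{\alpha\nu}=-\eta^{\alpha\mu}\eta^{\beta\nu}H_{\mu\beta}-\eta^{\beta\nu}H^{\alpha\mu}H_{\mu\beta}.
\end{equation}
Since the components $H^{\alpha\beta}$, $H_{\alpha\beta}$ enter~\eqref{eq-keyHup} as ordinary scalar functions, I would perform induction on the order $p$, treating $k$ as a passive parameter, and apply an arbitrary admissible operator $\mathscr{Z}^I$ of type $(p,k)$ to~\eqref{eq-keyHup} via the standard Leibniz rule.

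\paragraph{Base case and inductive step.} For $p=0$, Lemma~\ref{lem1-16-june-2025} gives $|H^{\alpha\beta}|\lesssim|H|+O(|H|^2)$, and the smallness ensured by~\eqref{eq-USA-condition} yields $|H^{\alpha\beta}|\lesssim|H|$. For $p\ge1$, applying $\mathscr{Z}^I$ to~\eqref{eq-keyHup} and expanding the product $\mathscr{Z}^I(H^{\alpha\mu}H_{\mu\beta})$ by admissible partitions $I_1\odot I_2=I$ (cf.\ \eqref{eq9-14-july-2025}), then separating out the term $I_1=I$, $I_2=\emptyset$ and moving it to the left, one arrives at
\begin{equation*}
\bigl(\delta^\nu_\mu+\eta^{\beta\nu}H_{\mu\beta}\bigr)\,\mathscr{Z}^I(H^{\alpha\mu})
=-\eta^{\alpha\mu}\eta^{\beta\nu}\mathscr{Z}^I(H_{\mu\beta})
-\eta^{\beta\nu}\!\!\sum_{\substack{I_1\odot I_2=I\\|I_1|<|I|}}\!\!\mathscr{Z}^{I_1}(H^{\alpha\mu})\,\mathscr{Z}^{I_2}(H_{\mu\beta}).
\end{equation*}
The matrix $\delta^\nu_\mu+\eta^{\beta\nu}H_{\mu\beta}$ on the left is $I+\eta^{-1}H$, uniformly invertible thanks to the zero-order smallness $|H|\lesssim\eps_s$ from~\eqref{eq-USA-condition}. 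Taking the maximum over multi-indices of type $(p,k)$ then gives
\begin{equation*}
|H^{\alpha\beta}|_{p,k}\lesssim_p|H|_{p,k}+\!\!\sum_{\substack{p_1+p_2\le p,\ p_1<p\\k_1+k_2\le k}}\!\!|H^{\cdot\cdot}|_{p_1,k_1}\,|H|_{p_2,k_2}.
\end{equation*}

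\paragraph{Closing the bound.} Into the quadratic remainder I would substitute the induction hypothesis $|H^{\cdot\cdot}|_{p_1,k_1}\lesssim_{p_1}|H|_{p_1,k_1}$ (valid since $p_1<p$), converting it into a sum of products $|H|_{p_1,k_1}|H|_{p_2,k_2}$ with $p_1+p_2\le p$. In each such product at least one factor has order bounded by $[p/2]$, and by hypothesis~\eqref{eq-vect-condition} that factor is $\le|H|_{[p/2]+1}\le1$; the remaining factor carries the full order and is absorbed into the desired $|H|_{p,k}$ term. This gives exactly~\eqref{eq4-16-july-2025}. The Moser-type product estimate needed here is of the classical flavor recorded in Section~\ref{section=N23}, and no new geometric input is required.

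\paragraph{Main obstacle.} The only delicate point is the bookkeeping of the rank parameter $k$ under the admissible partition $I_1\odot I_2=I$: one must verify that the ranks add correctly when $L_a$ and $\Omega_{ab}$ are distributed between the two factors. Since~\eqref{eq9-14-july-2025} preserves both the total order and the total number of boost/rotation fields, the partition $(p_1,k_1)+(p_2,k_2)=(p,k)$ is exact, so the right-hand side of~\eqref{eq4-16-july-2025} is structurally consistent with the expansion. Once this is observed, the remainder of the argument is routine induction combined with the uniform invertibility of $I+\eta^{-1}H$ provided by~\eqref{eq-USA-condition}.
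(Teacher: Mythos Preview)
Your proof is correct but takes a genuinely different route from the paper's. The paper differentiates the identity $Xg^{\alpha\beta}=-g^{\alpha\mu}(XH_{\mu\nu})g^{\nu\beta}$ repeatedly to obtain the closed-form expansion
\[
\mathscr{Z}^I H^{\alpha\beta}=\sum_{j=1}^{|I|}(-1)^j\!\!\sum_{I_1\odot\cdots\odot I_j=I}\!\!g^{\alpha\mu_1}\bigl(\mathscr{Z}^{I_1}H_{\mu_1\nu_1}\bigr)g^{\nu_1\mu_2}\cdots\bigl(\mathscr{Z}^{I_j}H_{\mu_j\nu_j}\bigr)g^{\nu_j\beta},
\]
in which the undifferentiated $g^{\cdot\cdot}$ coefficients are bounded by~\eqref{eq-USA-condition}, and then observes that for $j\ge2$ all but one factor has order $\le[p/2]$ and is absorbed via~\eqref{eq-vect-condition}. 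You instead work from the algebraic fixed-point relation $H^{\alpha\nu}=-\eta^{\alpha\mu}\eta^{\beta\nu}H_{\mu\beta}-\eta^{\beta\nu}H^{\alpha\mu}H_{\mu\beta}$, induct on $p$, and at each step invert the near-identity matrix $I+\eta^{-1}H$ to isolate the top-order term. Both arguments close using the same Moser-type observation that one factor in every quadratic term is low order. The paper's approach yields a reusable explicit formula~\eqref{eq1-11-july-2025} at the cost of establishing the multi-factor expansion by induction; your approach is more self-contained and avoids that expansion, trading it for the absorption step. Neither has a real advantage for the statement at hand.
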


\begin{proof}
\bse
This is the high-order version of \eqref{eq4-26-july-2025}. We recall that, for any vector field $X$,
\begin{equation}\label{eq4-06-oct-2025(l)}
XH^{\alpha\beta} = Xg^{\alpha\beta} 
= -g^{\alpha\mu}\big(XH_{\mu\nu}\big)g^{\nu\beta}
\end{equation}
therefore, by induction,
\begin{equation}\label{eq1-11-july-2025}
\mathscr{Z}^IH^{\alpha\beta} = \sum_{k=1}^{|I|}(-1)^k\!\!\!\sum_{I_1\odot I_2\cdots\odot I_k=I}\!\!\!
g^{\alpha\mu_1}\big(\mathscr{Z}^{I_1}H_{\mu_1\nu_1}\big)g^{\nu_1\mu_2}\big(\mathscr{Z}^{I_2}H_{\mu_2\nu_2}\big)\cdots\big(\mathscr{Z}^{I_k}H_{\mu_k\nu_k}\big)g^{\nu_k\beta}.
\end{equation}
When $k=1$ and $\ord(I) = p,\rank(Z) = k$, we have 
\be
\big|g^{\alpha\mu}\big(\mathscr{Z}^IH_{\mu\nu}\big)g^{\nu\beta}\big|\lesssim |H|_{p,k}, 
\ee
provided that \eqref{eq-USA-condition} holds. When $k\geq 2$, we observe that in each product, $|I_1|+|I_2|+\cdots |I_k|=|I|$. Without loss of generality, we assume that $|I_1|\geq |I_2|\geq \cdots\geq |I_k|$. Then $\ord(I_1)\leq p, \rank(I_1)\leq k$ and $|I|_j\leq [p/2]$ for $j=2,\cdots,k$. Hence, by \eqref{eq-vect-condition}, we obtain 
\be
\big|g^{\alpha\mu_1}\big(\mathscr{Z}^{I_1}H_{\mu_1\nu_1}\big)g^{\nu_1\mu_2}\big(\mathscr{Z}^{I_2}H_{\mu_2\nu_2}\big)\cdots\big(\mathscr{Z}^{I_k}H_{\mu_k\nu_k}\big)g^{\nu_k\beta}\big|\lesssim |H|_{p,k},
\ee
which provides us with the desired result.
\ese
\end{proof}

%--------------------------------------------------

\paragraph{Deformation tensors.}

We now quantify the size of the deformation tensors $\pi[Z]=\mathcal{L}_Z g$ in the high-order norms introduced above. Using the explicit formulas from the previous section and the smallness assumption~\eqref{eq-USA-condition}, we derive bounds for both covariant and contravariant components, valid for every commutation field $Z\in\mathscr{Z}:=\{\del_\mu,L_d,\Omega_{ab}\}$. These estimates show that each $\pi[Z]$ costs at most one additional derivative and one additional rank, up to lower-order products controlled by $H$. We first state the resulting estimates for the components of the deformation tensors. 

\begin{corollary}\label{cor1-21-aout-2025}
Assume that \eqref{eq-vect-condition} holds, and that \eqref{eq-USA-condition} holds for some sufficiently small $\eps_s$. Then, for any $Z\in\mathscr{Z}$ one has 
\begin{equation}\label{eq3-21-aout-2025}
|\pi[Z]_{\alpha\beta}|_{p,k}\lesssim_p |ZH|_{p,k} + |H|_{p,k}\lesssim |H|_{p+1,k+1},
\end{equation}
\begin{equation}\label{eq4-21-aout-2025}
\aligned
|\pi[Z]^{\alpha\beta}|_{p,k} & \lesssim_p   |ZH|_{p,k} 
+ \sum_{p_1+p_2\leq p\atop k_1+k_2\leq k}|ZH|_{p_1,k_1}|H|_{p_2,k_2}
\\
& \lesssim |H|_{p+1,k+1} 
+ \sum_{p_1+p_2\leq p\atop k_1+k_2\leq k}|H|_{p_1+1,k_1+1}|H|_{p_2,k_2}.
\endaligned
\end{equation}
In particular, when $Z = \del_{\delta}$, one finds 
\begin{equation}\label{eq1-21-aout-2025}
|\pi[\del_{\delta}]_{\alpha\beta}|_{p,k} \lesssim |\del H|_{p,k},
\end{equation}
\begin{equation}\label{eq2-21-aout-2025}
|\pi[\del_{\delta}]^{\alpha\beta}|_{p,k} \lesssim_p |\del H|_{p,k} 
+ \sum_{p_1+p_2\leq p\atop k_1+k_2\leq k}|\del H|_{p_1,k_1} |H|_{p_2,k_2}{ \,.}
\end{equation}
\end{corollary}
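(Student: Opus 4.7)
The plan is to read off the covariant components $\pi[Z]_{\alpha\beta}$ from the explicit formulas in Lemma~\ref{lem3-06-oct-2025(l)}, differentiate them with the admissible operators of type $(p,k)$, and then pass to the contravariant components by raising indices with $g^{\alpha\beta}=\eta^{\alpha\beta}+H^{\alpha\beta}$ and expanding the resulting product via Leibniz.

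First, for \eqref{eq3-21-aout-2025} I would treat the three cases $Z=\del_\mu$, $Z=L_d$, $Z=\Omega_{ab}$ uniformly: in each case Lemma~\ref{lem3-06-oct-2025(l)} gives
\[
\pi[Z]_{\alpha\beta} \;=\; Z H_{\alpha\beta} \;+\; \sum_{\text{finitely many}} c^{\mu\nu}_{\alpha\beta}\, H_{\mu\nu},
\]
where the $c^{\mu\nu}_{\alpha\beta}$ are constants (Kronecker $\delta$'s). Applying an admissible operator $\mathscr{Z}^I$ of type $(p,k)$ and taking the maximum gives $|\pi[Z]_{\alpha\beta}|_{p,k}\lesssim_p |ZH|_{p,k}+|H|_{p,k}$. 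The final bound $\lesssim |H|_{p+1,k+1}$ then follows because $Z\in\mathscr{Z}$ contributes at most one additional order and, when $Z\in\{L_d,\Omega_{ab}\}$, one additional rank; for $Z=\del_\mu$ the rank is unchanged and the bound is even stronger, which also yields \eqref{eq1-21-aout-2025}.

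Next, for \eqref{eq4-21-aout-2025} I would start from $\pi[Z]^{\alpha\beta}=g^{\alpha\mu}g^{\beta\nu}\pi[Z]_{\mu\nu}$ and apply $\mathscr{Z}^I$ by Leibniz:
\[
\mathscr{Z}^I\bigl(g^{\alpha\mu}g^{\beta\nu}\pi[Z]_{\mu\nu}\bigr)
\;=\;\sum_{I_1\odot I_2\odot I_3=I}\mathscr{Z}^{I_1}g^{\alpha\mu}\,\mathscr{Z}^{I_2}g^{\beta\nu}\,\mathscr{Z}^{I_3}\pi[Z]_{\mu\nu}.
\]
By Lemma~\ref{lem1-11-july-2025} (which uses the smallness condition \eqref{eq-vect-condition} together with \eqref{eq-USA-condition}), one has $|g^{\alpha\beta}|_{q,l}\lesssim 1+|H|_{q,l}$, so the factors with $|I_1|,|I_2|\le[p/2]$ are uniformly bounded and the remaining factor is estimated by \eqref{eq3-21-aout-2025}. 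Bookkeeping the partitions then yields \eqref{eq4-21-aout-2025}; specializing to $Z=\del_\delta$ and using $\pi[\del_\delta]_{\alpha\beta}=\del_\delta H_{\alpha\beta}$ gives \eqref{eq2-21-aout-2025}.

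The only nontrivial step is the combinatorial Leibniz expansion in the second case, since one must keep track of both $\ord$ and $\rank$ along the partition $I=I_1\odot I_2\odot I_3$. I expect no further obstacle: the key smallness $|H|_{[p/2]+1}\le 1$ allows the inner (low-order) factors of $H$ in the expansion of $g^{\alpha\mu}$ via \eqref{eq1-11-july-2025} to be absorbed into constants, so that the final estimate features only a single factor of $|H|_{q,l}$ with the top $(q,l)$, together with the quadratic term $|ZH|_{p_1,k_1}|H|_{p_2,k_2}$ displayed in \eqref{eq4-21-aout-2025}.
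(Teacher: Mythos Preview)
Your proposal is correct and follows the same approach that the paper intends: the corollary is stated without proof, as an immediate consequence of the explicit formulas in Lemma~\ref{lem3-06-oct-2025(l)} for the covariant components together with Lemma~\ref{lem1-11-july-2025} for the inverse-metric factors, which is exactly the route you describe. Your identification of the Leibniz expansion over the admissible partition $I=I_1\odot I_2\odot I_3$ as the only bookkeeping step, with \eqref{eq-vect-condition} absorbing the low-order $H$-factors, is precisely the mechanism the paper relies on throughout (cf.\ the product estimates in Proposition~\ref{prop1-15-july-2025}).
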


\begin{lemma}\label{lem2-18-july-2025}
Assume \eqref{eq-vect-condition} holds, and \eqref{eq-USA-condition} holds for some sufficiently small $\eps_s$. Then one has 
\begin{equation}
\big|[\del_{\delta},Z]\Psi\big|_{p,k}\lesssim_p [\del\Psi]_{p,k} + \zetab^{-1}\Hcom_{p}[\Psi], 
\end{equation}
in which 
\begin{equation}
\big|\Hcom_{p}[\Psi]\big|_{\vec{n}}
\lesssim_p \sum_{p_1+p_2+p_3\leq p}|H|_{p_1+1}|\del H|_{p_2}[\Psi]_{p_3}.
\end{equation}
\end{lemma}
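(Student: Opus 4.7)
The plan is to reduce the high-order commutator estimate to the pointwise identity of Proposition~\ref{prop1-22-june-2025} applied with $X=\del_\delta$ and $Y=Z$, and then to control each resulting term via the product rules of Proposition~\ref{prop1-15-july-2025} together with the deformation tensor estimates of Corollary~\ref{cor1-21-aout-2025}. Concretely, Proposition~\ref{prop1-22-june-2025} gives
\[
[\widehat{\del_\delta},\widehat{Z}]\Psi
= \widehat{[\del_\delta,Z]}\Psi
- \tfrac{1}{8}\Big(g^{\alpha\beta}\pi[Z]^{\mu\nu}\pi[\del_\delta]_{\beta\nu}\,\del_{\mu}\cdot\del_{\alpha}
+ \pi[\del_\delta]^{\alpha\beta}\pi[Z]_{\alpha\beta}\Big)\cdot \Psi,
\]
and I will estimate the two contributions separately after applying an arbitrary admissible operator $\mathscr{Z}^I$ of type $(p,k)$.

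First, for every $Z\in\mathscr{Z}$, the flat commutator $[\del_\delta,Z]$ is a finite linear combination of the $\del_\mu$'s with coefficients that are either identically zero (when $Z=\del_\mu$) or homogeneous of degree zero (when $Z=L_a,\Omega_{ab}$). Hence $\widehat{[\del_\delta,Z]}\Psi$ is a linear combination, with such homogeneous coefficients, of the modified partial derivatives $\widehat{\del_\mu}\Psi$. Applying $\mathscr{Z}^I$ and invoking \eqref{eq4-11-july-2025} together with the definition~\eqref{equa-notation-spineur} yields
\[
\big|\mathscr{Z}^I\widehat{[\del_\delta,Z]}\Psi\big|_{\vec{n}}\lesssim_p [\del\Psi]_{p,k}.
\]
This produces the first term in the claimed estimate.

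Next, I turn to the quadratic deformation correction, which will produce the trilinear remainder $\Hcom_p[\Psi]$. By Lemma~\ref{lem3-06-oct-2025(l)}, $\pi[\del_\delta]_{\alpha\beta}=\del_\delta H_{\alpha\beta}$ while $\pi[Z]_{\alpha\beta}$ is a linear combination of $ZH_{\alpha\beta}$ and of $H_{\alpha\beta}$ with constant coefficients; combined with \eqref{eq4-16-july-2025} to pass from $\pi$ to $\pi^{\alpha\beta}$, Corollary~\ref{cor1-21-aout-2025} yields $|\pi[\del_\delta]|_{p,k}\lesssim |\del H|_{p,k}$ and $|\pi[Z]|_{p,k}\lesssim |H|_{p+1,k+1}$, up to multilinear corrections absorbable by \eqref{eq-vect-condition}. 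Distributing $\mathscr{Z}^I$ over the quadratic correction by Leibniz produces, in each resulting summand, a product of three high-order factors of schematic form $\pi[Z]\cdot\pi[\del_\delta]\cdot\Psi$. The \emph{scalar} summand $\pi[\del_\delta]^{\alpha\beta}\pi[Z]_{\alpha\beta}\cdot\Psi$ is bounded directly by Proposition~\ref{prop1-15-july-2025}(1) and fits $|H|_{p_1+1}|\del H|_{p_2}[\Psi]_{p_3}$; the Clifford summand $g^{\alpha\beta}\pi[Z]^{\mu\nu}\pi[\del_\delta]_{\beta\nu}\,\del_\mu\cdot\del_\alpha\cdot\Psi$ is handled by Proposition~\ref{prop1-15-july-2025}(3), which is precisely where the single factor $\zetab^{-1}$ enters (from the double Clifford multiplication $\del_\mu\cdot\del_\alpha\cdot\Psi$) and which at the same time absorbs the additional $H$-factors generated by derivatives of $g^{\alpha\beta}$.

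Combining the two estimates and grouping the multi-indices as $p_1+p_2+p_3\le p$ reproduces the claimed bound, with $\Hcom_p[\Psi]$ of the stated trilinear form. The main technical obstacle is organizing the Leibniz combinatorics when $\mathscr{Z}^I$ is distributed over a triple product while keeping the shift ``$+1$'' on the $H$-factor (so that each $\pi[Z]$ is costed as $|H|_{\,\cdot+1}$, consistent with~\eqref{eq3-21-aout-2025}) and while tracking a \emph{single} $\zetab^{-1}$ power through the double Clifford multiplication; this bookkeeping is the only non-routine point, all other steps being direct applications of the established product rules and the smallness of $H$.
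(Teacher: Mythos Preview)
Your proposal is correct and follows essentially the same approach as the paper: both begin with Proposition~\ref{prop1-22-june-2025} applied to $X=\del_\delta$, $Y=Z$, bound the main term $\widehat{[\del_\delta,Z]}\Psi$ via the fact that $[\del_\delta,Z]$ is a constant-coefficient combination of translations, and then control the scalar and Clifford quadratic corrections through Corollary~\ref{cor1-21-aout-2025} together with Proposition~\ref{prop1-15-july-2025}(1) and~(3). One minor sharpening: the coefficients in $[\del_\delta,Z]$ are in fact constants (not merely homogeneous of degree zero), so you do not even need~\eqref{eq4-11-july-2025} for that step.
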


\begin{proof} 
\bse
We rely on Proposition~\ref{prop1-22-june-2025}. In fact, when $X=\del_{\delta}$ and $Y=Z\in\mathscr{Z}$, we have 
\begin{equation}\label{eq5-16-aout-2025}
[\widehat{\del_{\delta}},\widehat{Z}]\Psi = \widehat{[\del_{\delta},Z]}\Psi 
- \frac{1}{8}\Big(g^{\alpha\beta}\pi[Z]^{\mu\nu}\pi[\del_{\delta}]_{\beta\nu}\del_{\mu}\cdot\del_{\alpha} 
+ \pi[\del_{\delta}]^{\alpha\beta}\pi[Z]_{\alpha\beta}\Big)\cdot \Psi.
\end{equation}
It is easy to check that
\be
\big[\widehat{[\del_{\delta},Z]}\Psi\big]_{p,k}\lesssim_p [\del\Psi]_{p,k}. 
\ee
We need to control the remaining cubic terms. The last term is a scalar multiplied on $\Psi$, thus can be bounded via \eqref{eq1-17-july-2025}. Using Corollary~\ref{cor1-21-aout-2025}, we obtain 
\be
\big|\pi[\del_{\delta}]^{\alpha\beta}\pi[Z]_{\alpha\beta}\big|_{p,k}\lesssim_p
\sum_{p_1+p_2\leq p\atop k_1+k_2\leq k}|\del H|_{p_1,k_1}|H|_{p_2+1,k_2+1}
\ee
and, therefore, 
\begin{equation}\label{eq14-18-july-2025}
[\pi[\del_{\delta}]^{\alpha\beta}\pi[Z]_{\alpha\beta}\cdot\Psi]_{p,k}
\lesssim_p\sum_{p_1+p_2+p_3\leq p\atop k_1+k_2+k_3\leq k}
[\Psi]_{p_1,k_1}|\del H|_{p_2,k_2}|H|_{p_3+1,k_3+1}.
\end{equation}
For the first cubic term, we set 
\be
\Phi:=g^{\alpha\beta}\pi[Z]^{\mu\nu}\pi[\del_{\delta}]_{\beta\nu}\del_{\mu}\cdot\del_{\alpha}\cdot\Psi.
\ee
In view of Lemma~\ref{lem1-11-july-2025} and Corollary~\ref{cor1-21-aout-2025}, we obtain 
\begin{equation}\label{eq1-26-aout-2025}
\, [\Phi]_{p,k}\lesssim_p\sum_{p_1+p_2+p_3\leq p\atop k_1+k_2+k_3\leq k}
|H|_{p_1+1,k_1+1}\,|\del H|_{p_2,k_2}\, [\Psi]_{p_3,k_3}. 
\end{equation}
Combining \eqref{eq1-26-aout-2025} with \eqref{eq3-23-july-2025} and the product estimate \eqref{eq1-17-july-2025} yields the desired bound. 
\ese
\end{proof}

%----------------------------------------------

\paragraph{Commutator calculus for modified derivatives.}

We next relate the high-order norm $[\del\Psi]_{p,k}$ to commuted modified derivatives of the form $\widehat{\del_\alpha}(\mathscr{Z}^I\Psi)$ and quantify the commutator errors. The key point is that $[\mathscr{Z}^I,\widehat{\del_\alpha}]$ produces at most lower-order terms, together with cubic remainders controlled by a term denoted $\zetab^{-1}\Hcom_{p-1}[\Psi]$, which we control below. This allows us to pass back and forth between applying $\widehat{\del_\alpha}$ before or after the admissible operators $\mathscr{Z}^I$, which will be used repeatedly in the high-order estimates. We will prove the following estimate.

\begin{proposition}\label{prop1-21-july-2025}
Assume that the conditions \eqref{eq-vect-condition} and \eqref{eq-USA-condition} hold for some sufficiently small $\eps_s$. Then for any spinor field $\Psi$ and any integers $p,k$, one has 
\begin{equation}\label{eq1-22-july-2025}
[\del\Psi]_{p,k}
\lesssim \sum_{\alpha}\sum_{\ord(I)\leq p\atop \rank(I)\leq k}\big|\widehat{\del_{\alpha}}(\mathscr{Z}^I\Psi)\big|_{\vec{n}}
+ \zetab^{-1}\Hcom_{p-1}[\Psi],\quad \text{in } \Mcal_{[s_0,s_1]}.
\end{equation}
For any type $(p,k)$ admissible operator $\mathscr{Z}^I$, one also has 
\begin{equation}\label{eq6-15-aout-2025}
\big|\widehat{\del_{\alpha}}\mathscr{Z}^I\Psi\big|_{\vec{n}}
\lesssim [\del \Psi]_{p,k} + \zetab^{-1}\Hcom_{p-1}[\Psi], 
\end{equation}
in which 
\begin{equation}
\big|\Hcom_{p}[\Psi]\big|
\lesssim_p \sum_{p_1+p_2+p_3\leq p}[\Psi]_{p_1}|H|_{p_2+1}|\del H|_{p_3}. 
\end{equation}
\end{proposition}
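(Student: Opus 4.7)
The strategy is to prove, by induction on $p$, the stronger norm-level commutator bound
$$\bigl[\,[\mathscr{Z}^I,\widehat{\del_\alpha}]\Psi\,\bigr]_{q,q'} \lesssim \sum_{\alpha',\,\ord(J)\leq p-1+q,\,\rank(J)\leq k+q'}\bigl|\widehat{\del_{\alpha'}}\mathscr{Z}^J\Psi\bigr|_{\vec{n}} + \zetab^{-1}\Hcom_{p+q-1}[\Psi]$$
for every admissible operator $\mathscr{Z}^I$ of type $(p,k)$ and all integers $q,q'\geq 0$. Specialized to $(q,q')=(0,0)$, this estimate combined with the triangle inequality applied to the splitting $\mathscr{Z}^I\widehat{\del_\alpha}\Psi = \widehat{\del_\alpha}\mathscr{Z}^I\Psi + [\mathscr{Z}^I,\widehat{\del_\alpha}]\Psi$ and the definition of $[\del\Psi]_{p,k}$ yields both \eqref{eq1-22-july-2025} and \eqref{eq6-15-aout-2025}.

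The inductive step is driven by the factorization $\mathscr{Z}^I = \widehat{Z}_1\mathscr{Z}^{I'}$, where $\mathscr{Z}^{I'}$ has type $(p-1,k_1)$ with $k_1 = k$ if $Z_1$ is a translation and $k_1 = k-1$ if $Z_1\in\{L_a,\Omega_{ab}\}$, together with the derivation identity
$$[\widehat{Z}_1\mathscr{Z}^{I'},\widehat{\del_\alpha}]\Psi = \widehat{Z}_1\bigl([\mathscr{Z}^{I'},\widehat{\del_\alpha}]\Psi\bigr) + [\widehat{Z}_1,\widehat{\del_\alpha}]\mathscr{Z}^{I'}\Psi.$$
The second contribution is handled directly by Lemma~\ref{lem2-18-july-2025} applied to the spinor $\mathscr{Z}^{I'}\Psi$ at order $(q,q')$: its leading piece $[\del(\mathscr{Z}^{I'}\Psi)]_{q,q'}$ absorbs $\mathscr{Z}^{I'}$ into the admissible operator $\mathscr{Z}^J$ appearing on the right-hand side, while the cubic remainder is controlled via $[\mathscr{Z}^{I'}\Psi]_{q''}\leq[\Psi]_{q''+p-1}$ and thereby fits into $\zetab^{-1}\Hcom_{p+q-1}[\Psi]$. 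For the first contribution, the elementary bound $[\widehat{Z}_1Y]_{q,q'}\leq[Y]_{q+1,q'+\delta}$ (with $\delta=0$ or $1$ according to the type of $Z_1$) reduces the problem to the inductive hypothesis applied at orders $(q+1,q'+\delta)$, and the total-order count $(p-1)-1+(q+1)=p+q-1$ together with the rank count $k_1+q'+\delta=k+q'$ matches the target exactly.

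The main technical obstacle lies in the combinatorial bookkeeping for the cubic remainder $\Hcom_{p+q-1}[\Psi]$: when the outer operators fall on a triple product of shape $|H|_{p_2+1}|\del H|_{p_3}[\Psi]_{p_1}$, Proposition~\ref{prop1-15-july-2025} --- in particular the product rule \eqref{eq3-16-july-2025} and the iterated Clifford bound \eqref{eq3-23-july-2025} --- guarantees that the Leibniz rule distributes derivatives across the three factors in a way that preserves the total-order budget. Corollary~\ref{cor1-21-aout-2025} then controls the deformation tensors $\pi[Z_1]$ and $\pi[\del_\alpha]$ appearing in the explicit formula for $[\widehat{Z}_1,\widehat{\del_\alpha}]$ provided by Proposition~\ref{prop1-22-june-2025}, ensuring that the single $\zetab^{-1}$ loss generated by the Clifford multiplication does not accumulate as $p$ grows and that the induction closes uniformly in the order.
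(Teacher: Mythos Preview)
Your front-factorization strategy is a legitimate dual to the paper's, which instead peels off the \emph{rightmost} vector via $\mathscr{Z}^{I'}=\mathscr{Z}^I\widehat{Z}$ (see Lemma~\ref{lem1-16-aout-2025}) and works purely at the pointwise level, without any auxiliary indices $(q,q')$. However, your induction scheme as stated has a gap.

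The problem lies in your handling of the second contribution. After invoking Lemma~\ref{lem2-18-july-2025} you arrive at $[\del(\mathscr{Z}^{I'}\Psi)]_{q,q'}$, which unpacks to
\[
\max_{\beta}\max_{\ord(K)\le q,\;\rank(K)\le q'}\bigl|\mathscr{Z}^K\widehat{\del_\beta}\mathscr{Z}^{I'}\Psi\bigr|_{\vec{n}}.
\]
Here $\widehat{\del_\beta}$ sits in the \emph{middle}, whereas your target carries $\widehat{\del_{\alpha'}}$ at the \emph{front}. Your phrase ``absorbs $\mathscr{Z}^{I'}$ into $\mathscr{Z}^J$'' does not address this: to reach the target form you must commute $\mathscr{Z}^K$ past $\widehat{\del_\beta}$, which is precisely the commutator bound you are proving, now invoked for an operator of order up to $q$. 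Since you claim the estimate for \emph{all} $q\ge0$ at each step $p$, the case $q\ge p$ forces you to call on an instance not yet established. Already at the base step $p=1$, the formula \eqref{eq4-16-aout-2025} gives $[\widehat{\del_\alpha},\widehat{Z}]\Psi=\Gamma(Z)_\alpha^\gamma\widehat{\del_\gamma}\Psi+\text{cubic}$, so $\bigl[[\widehat{Z},\widehat{\del_\alpha}]\Psi\bigr]_{q,q'}$ is controlled by $[\del\Psi]_{q,q'}$ plus the remainder, and converting $[\del\Psi]_{q,q'}$ into the front-derivative form requires the full result at $(q,0,0)$; so the induction on $p$ alone does not close.

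The fix is minor but necessary: replace the induction on $p$ by an induction on $N:=p+q$ (with a secondary descent on $p$ to handle Term~1, which stays at the same $N$). Then every $\mathscr{Z}^K$ arising in Term~2 has $\ord(K)\le q=N-p<N$ and is covered by the outer hypothesis. Since Proposition~\ref{prop1-21-july-2025} only requires the specialization $(q,q')=(0,0)$, nothing is lost. By contrast, the paper's back-factorization avoids the issue entirely: the term $\mathscr{Z}^I\bigl([\widehat{Z},\widehat{\del_\delta}]\Psi\bigr)$ reduces by \eqref{eq4-16-aout-2025} to $\Gamma(Z)_\delta^\gamma\,\mathscr{Z}^I\widehat{\del_\gamma}\Psi$ plus cubic remainders, and the only commutator needed is $[\mathscr{Z}^I,\widehat{\del_\gamma}]\Psi$ at order exactly $|I|$, which the straight induction on $|I|$ supplies directly.
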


The proof is based on the following observations.

\begin{lemma}\label{lem1-16-aout-2025}
Assume that the conditions \eqref{eq-vect-condition} and \eqref{eq-USA-condition} hold for some sufficiently small $\eps_s$. Then for $\mathscr{Z}^I$ of  type $(p,k)$,
\begin{equation}\label{eq2-16-aout-2025}
\big|\big[\mathscr{Z}^I,\widehat{\del_{\alpha}}\big]\Psi\big|_{\vec{n}}
\lesssim_p  \sum_{\beta}\sum_{\ord(J)<p\atop \rank(J)<k}
\big|\widehat{\del_{\beta}}\mathscr{Z}^J\Psi\big|_{\vec{n}} 
+ \zetab^{-1}\Hcom_{p-1}[\Psi] \quad \text{in } \Mcal_{[s_0,s_1]},
\end{equation}
\begin{equation}\label{eq3-16-aout-2025}
\aligned
\big|[\mathscr{Z}^I,\widehat{\del_{\alpha}}]\Psi\big|_{\vec{n}}
& \lesssim_p  \sum_{\beta}\sum_{\ord(J)<p\atop \rank(J)<k}
\big|\mathscr{Z}^J\widehat{\del_{\beta}}\Psi\big|_{\vec{n}} + \zetab^{-1}\Hcom_{p-1}[\Psi]
\\
& \lesssim_p  [\del\Psi]_{p-1,k-1}+\zetab^{-1}\Hcom_{p-1}[\Psi] 
\quad\quad\quad\quad \quad\quad \text{in } \Mcal_{[s_0,s_1]}.
\endaligned
\end{equation}
\end{lemma}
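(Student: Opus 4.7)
The plan is to prove \eqref{eq2-16-aout-2025} by induction on $p=\ord(I)$, with the base case $p=1$ supplied directly by Lemma~\ref{lem2-18-july-2025}: for $Z\in\mathscr{Z}$,
\[
[\widehat{Z},\widehat{\del_{\alpha}}]\Psi = -[\widehat{\del_{\alpha}},\widehat{Z}]\Psi,
\]
and the right-hand side of Lemma~\ref{lem2-18-july-2025} is exactly of the form
\[
\big|[\widehat{\del_{\alpha}},\widehat{Z}]\Psi\big|_{\vec{n}}
\lesssim [\del\Psi]_{0,0}+\zetab^{-1}\Hcom_{0}[\Psi].
\]
Since for $|I|=1$ the crucial fact is that $[Z,\del_{\alpha}]$ (as a usual Lie bracket in $\RR^{1+3}$) is either $0$ (when $Z=\del_{\beta}$) or a single partial derivative with rank strictly smaller (when $Z=L_a$ or $Z=\Omega_{ab}$), the principal part $\widehat{[\del_{\alpha},Z]}\Psi$ in \eqref{eq5-16-aout-2025} is absorbed in $\sum_{\beta}\sum_{\ord(J)<1,\rank(J)<1}|\widehat{\del_{\beta}}\mathscr{Z}^J\Psi|_{\vec{n}}$, and the two cubic remainders are controlled by $\zetab^{-1}\Hcom_0[\Psi]$ via Corollary~\ref{cor1-21-aout-2025}, Lemma~\ref{lem1-11-july-2025} and \eqref{eq3-23-july-2025}, exactly as in the derivation of \eqref{eq14-18-july-2025} and \eqref{eq1-26-aout-2025}.

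For the inductive step, I would write $\mathscr{Z}^I=\widehat{Z}\,\mathscr{Z}^{I'}$ with $\ord(I')=p-1$ and $\rank(I')\le k-1$ (or $\le k$ if $Z$ is a translation), and expand
\[
[\mathscr{Z}^I,\widehat{\del_{\alpha}}]\Psi
=\widehat{Z}\,[\mathscr{Z}^{I'},\widehat{\del_{\alpha}}]\Psi
+\big[\widehat{Z},\widehat{\del_{\alpha}}\big](\mathscr{Z}^{I'}\Psi).
\]
The second term is handled by the base case applied to the spinor field $\mathscr{Z}^{I'}\Psi$: it is bounded by $[\del(\mathscr{Z}^{I'}\Psi)]_{0,0}+\zetab^{-1}\Hcom_{0}[\mathscr{Z}^{I'}\Psi]$, which in turn fits into $\sum_{\beta,J}|\widehat{\del_{\beta}}\mathscr{Z}^J\Psi|_{\vec{n}}+\zetab^{-1}\Hcom_{p-1}[\Psi]$ after using the high-order product rules of Proposition~\ref{prop1-15-july-2025} to distribute the admissible operators onto the factors $|H|_{\cdot+1}$, $|\del H|_{\cdot}$ and $[\Psi]_{\cdot}$ present in $\Hcom$. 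The first term is dealt with by the inductive hypothesis applied to $\mathscr{Z}^{I'}$, which gives
\[
\big|[\mathscr{Z}^{I'},\widehat{\del_{\alpha}}]\Psi\big|_{\vec{n}}
\lesssim_{p}\sum_{\beta}\sum_{\ord(J)<p-1\atop \rank(J)<k-1}
\big|\widehat{\del_{\beta}}\mathscr{Z}^{J}\Psi\big|_{\vec{n}}+\zetab^{-1}\Hcom_{p-2}[\Psi].
\]
Applying $\widehat{Z}$ to this bound and invoking Proposition~\ref{prop1-15-july-2025} (items~1 and~3) together with the zero-order estimates of Lemma~\ref{lem1-01-aout-2025} produces the desired estimate with $\Hcom_{p-1}$ on the right-hand side; the key point in the bookkeeping is that $\widehat{Z}$ raises both order and rank by exactly one, matching the index shift in the inductive hypothesis.

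To obtain the second form \eqref{eq3-16-aout-2025}, I would take the first form and, for each term $|\widehat{\del_{\beta}}\mathscr{Z}^J\Psi|_{\vec{n}}$ with $\ord(J)<p$, apply the inductive hypothesis once more to swap $\widehat{\del_{\beta}}$ past $\mathscr{Z}^J$, i.e.
\[
\widehat{\del_{\beta}}\mathscr{Z}^J\Psi
=\mathscr{Z}^J\widehat{\del_{\beta}}\Psi+[\widehat{\del_{\beta}},\mathscr{Z}^J]\Psi,
\]
and control the commutator by the first part of the lemma (which is now available at strictly lower order). The residual commutator produces only further contributions to $\zetab^{-1}\Hcom_{p-1}[\Psi]$, while the main term $\mathscr{Z}^J\widehat{\del_{\beta}}\Psi$ is, by definition of $[\,\cdot\,]_{p-1,k-1}$, bounded by $[\del\Psi]_{p-1,k-1}$, yielding the second line of \eqref{eq3-16-aout-2025}.

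The only genuinely delicate step will be the bookkeeping of the cubic remainders. Each application of Proposition~\ref{prop1-22-june-2025} contributes a term quadratic in the deformation tensors $\pi[Z]^{\alpha\beta}\pi[\del_{\alpha}]_{\cdot\cdot}$ acting on $\Psi$ via Clifford products; after iteration, one must verify, using \eqref{eq4-21-aout-2025} and the product rule \eqref{eq3-16-july-2025}, that all such terms assemble into sums of the form $\sum_{p_1+p_2+p_3\le p-1}[\Psi]_{p_1}|H|_{p_2+1}|\del H|_{p_3}$ with the correct $\zetab^{-1}$ prefactor. Keeping the total order and rank aligned with the index $(p-1,k-1)$ throughout, while avoiding a spurious growth from the factors $g^{\alpha\beta}$ and $g^{\alpha\mu}g^{\beta\nu}$ in Lemma~\ref{lem1-11-july-2025}, is where the main technical effort will lie; the smallness assumption \eqref{eq-vect-condition} is precisely what allows the higher powers of $|H|$ to be absorbed into the same cubic structure $\Hcom$.
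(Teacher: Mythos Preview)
Your overall plan (induction on $|I|$) matches the paper's, and your treatment of the base case and of \eqref{eq3-16-aout-2025} as a consequence of \eqref{eq2-16-aout-2025} would work. The gap is in the inductive step for \eqref{eq2-16-aout-2025}, specifically in how you handle the term $\widehat{Z}\,[\mathscr{Z}^{I'},\widehat{\del_{\alpha}}]\Psi$ under your left factorization $\mathscr{Z}^I=\widehat{Z}\,\mathscr{Z}^{I'}$.

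The inductive hypothesis is a \emph{pointwise norm bound} on $|[\mathscr{Z}^{I'},\widehat{\del_{\alpha}}]\Psi|_{\vec{n}}$, not an algebraic decomposition. You cannot ``apply $\widehat{Z}$'' to such a bound: there is no inequality of the form $|\widehat{Z}\Phi|_{\vec{n}}\lesssim C|\Phi|_{\vec{n}}$, and Proposition~\ref{prop1-15-july-2025} (which estimates $[\,u\Psi\,]_{p,k}$ or $[\,\del_{\alpha_1}\cdots\del_{\alpha_n}\cdot\Psi\,]_{p,k}$) applies to \emph{explicit} Clifford-product-times-scalar expressions, not to an abstract spinor known only through a norm estimate. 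To make your approach rigorous you would need a stronger inductive statement --- an explicit structural decomposition of the commutator into Clifford-product terms --- which amounts to redoing the whole argument at each step.

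The paper avoids this by factoring from the \emph{right}: write $\mathscr{Z}^{I'}=\mathscr{Z}^I\widehat{Z}$ and use
\[
[\mathscr{Z}^I\widehat{Z},\widehat{\del_{\delta}}]\Psi
=\mathscr{Z}^I\bigl([\widehat{Z},\widehat{\del_{\delta}}]\Psi\bigr)
+[\mathscr{Z}^I,\widehat{\del_{\delta}}](\widehat{Z}\Psi).
\]
Now the first term applies the high-order operator $\mathscr{Z}^I$ to the \emph{explicit} expression \eqref{eq4-16-aout-2025} coming from Proposition~\ref{prop1-22-june-2025}, and this \emph{is} amenable to the product estimates of Proposition~\ref{prop1-15-july-2025}; after writing $\Gamma(Z)_{\delta}^{\gamma}\mathscr{Z}^I\widehat{\del_{\gamma}}\Psi=\Gamma(Z)_{\delta}^{\gamma}\widehat{\del_{\gamma}}\mathscr{Z}^I\Psi+\Gamma(Z)_{\delta}^{\gamma}[\mathscr{Z}^I,\widehat{\del_{\gamma}}]\Psi$ one obtains a piece that fits the right-hand side directly (and vanishes when $Z=\del_{\mu}$, since then $\Gamma(Z)=0$) and a piece to which the inductive hypothesis applies. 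The second term applies the inductive hypothesis with $\Psi$ replaced by $\widehat{Z}\Psi$, which is legitimate because the lemma is stated for arbitrary spinor fields. This direction of factorization is the structural choice that makes the induction close without differentiating a norm bound.
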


\begin{proof}
\bse
We proceed by induction on $|I|$, and we first treat \eqref{eq2-16-aout-2025}. When $|I|=1$, this is direct by Proposition~\ref{prop1-22-june-2025}. In fact, we have 
\begin{equation}\label{eq4-16-aout-2025}
[\widehat{\del_{\delta}},\widehat{Z}]\Psi 
= \Gamma(Z)_{\delta }^{\gamma}\widehat{\del_{\gamma}}\Psi
- \frac{1}{8}\big(g^{\alpha\beta}\pi[Z]^{\mu\nu}\pi[\del_{\delta}]_{\beta\nu}\del_{\mu}\cdot\del_{\alpha} + \pi[\del_{\delta}]^{\alpha\beta}\pi[Z]_{\alpha\beta}\big)\cdot\Psi,
\end{equation} 
The last term is cubic and in fact can be bounded by
\be
\zetab^{-1}|\Psi|_{\vec{n}}|H|_1|\del H|\lesssim \zetab^{-1}\Hcom_0[\Psi].
\ee
Now we consider $\mathscr{Z}^{I'} = \mathscr{Z}^I\widehat{Z}$ of type $(p',k')$ with $p'=p+1,k' = k $ or $k' = k+1$ . Then
\begin{equation}\label{eq6-16-aout-2025}
[\mathscr{Z}^I\widehat{Z},\widehat{\del_{\delta}}]\Psi 
= \mathscr{Z}^I([\widehat{Z},\widehat{\del_{\delta}}]\Psi) + [\mathscr{Z}^I,\widehat{\del_{\delta}}](\widehat{Z}\Psi).
\end{equation}
For the first term, we take \eqref{eq4-16-aout-2025}:
\begin{equation}\label{eq7-16-aout-2025}
\aligned
\mathscr{Z}^I\big([\widehat{\del_{\delta}},\widehat{Z}]\Psi\big)
& = \Gamma(Z)_{\delta}^{\gamma}\mathscr{Z}^I\widehat{\del_{\gamma}}\Psi 
\\
& \quad -  \frac{1}{8}\mathscr{Z}^I\Big(\big(g^{\alpha\beta}\pi[Z]^{\mu\nu}\pi[\del_{\delta}]_{\beta\nu}\del_{\mu}\cdot\del_{\alpha} + \pi[\del_{\delta}]^{\alpha\beta}\pi[Z]_{\alpha\beta}\big)\cdot\Psi\Big)
\\
& = \Gamma(Z)_{\delta}^{\gamma}\widehat{\del_{\gamma}}\mathscr{Z}^I\Psi
+ \Gamma(Z)_{\delta}^{\gamma}[\mathscr{Z}^I,\widehat{\del_{\gamma}}]\Psi
\\
& \quad -  \frac{1}{8}\mathscr{Z}^I\Big(\big(g^{\alpha\beta}\pi[Z]^{\mu\nu}\pi[\del_{\delta}]_{\beta\nu}\del_{\mu}\cdot\del_{\alpha} + \pi[\del_{\delta}]^{\alpha\beta}\pi[Z]_{\alpha\beta}\big)\cdot\Psi\Big).
\endaligned
\end{equation}
The last term in the right-hand side of the above expression is bounded by $\zetab^{-1}\Hcom_{p}[\Psi]$ in the proof of Lemma~\ref{lem2-18-july-2025}. For the first two terms, we distinguish between two cases. When $Z = \del_{\alpha}$, $\Gamma(Z)_{\delta}^{\gamma} = 0$, and thus 
\begin{equation}
\aligned
\big|\mathscr{Z}^I\big([\widehat{Z},\widehat{\del_{\delta}}]\Psi\big)\big|_{\vec{n}}
& \lesssim_p \zetab^{-1}\Hcom_p[\Psi], 
\endaligned
\end{equation} 
which fits the induction. When $Z = L_a$, the $k' = k+1$, and we apply the induction, namely 
\begin{equation}
\aligned
\big|\mathscr{Z}^I\big([\widehat{Z},\widehat{\del_{\delta}}]\Psi\big)\big|_{\vec{n}}
& \lesssim_p  \big|\widehat{\del_{\gamma}}\mathscr{Z}^I\Psi\big|_{\vec{n}}
+ \sum_{\beta}\sum_{\ord(J)<p\atop \rank(J)<k}
\big|\widehat{\del_{\beta}}\mathscr{Z}^J\Psi\big|_{\vec{n}} 
+ \zetab^{-1}\Hcom_{p-1}[\Psi] + \zetab^{-1}\Hcom_p[\Psi]
\\
& \lesssim_p  \sum_{\beta}\sum_{\ord(J)<p\atop \rank(J)<k}
\big|\widehat{\del_{\beta}}\mathscr{L}^J\Psi\big|_{\vec{n}}
+\zetab^{-1}\Hcom_p[\Psi]
\endaligned
\end{equation}
For the second term in the right-hand side of \eqref{eq6-16-aout-2025}, we also apply the induction for order $p$:
\be
\big|[\mathscr{Z}^I,\widehat{\del_{\delta}}](\widehat{Z}\Psi)\big|_{\vec{n}}
\lesssim_p
\sum_{\beta}\sum_{\ord(J)<p\atop \rank(J)<k}
\big|\widehat{\del_{\beta}}\mathscr{Z}^J\widehat{Z}\Psi\big|_{\vec{n}} 
+ \zetab^{-1}\Hcom_{p-1}[\widehat{Z}\Psi],
\ee
which also fits the induction and we have established \eqref{eq2-16-aout-2025}.

For \eqref{eq3-16-aout-2025}, we observe that \eqref{eq4-16-aout-2025} also yields the case $|I| = 1$. Then we consider $\mathscr{Z}^{I'} = \mathscr{Z}^I\widehat{Z}$ with $\ord(I) = p, \rank(I) = k$. Then in view of \eqref{eq6-16-aout-2025}, we arrive at 
\begin{equation}\label{eq8-16-aout-2025}
\aligned
\mathscr{Z}^I\big([\widehat{\del_{\delta}},\widehat{Z}]\Psi\big)
& = \Gamma(Z)_{\delta}^{\gamma}\mathscr{Z}^I\widehat{\del_{\gamma}}\Psi 
\\
& \quad -  \frac{1}{8}\mathscr{Z}^I\Big(\big(g^{\alpha\beta}\pi[Z]^{\mu\nu}\pi[\del_{\delta}]_{\beta\nu}\del_{\mu}\cdot\del_{\alpha} + \pi[\del_{\delta}]^{\alpha\beta}\pi[Z]_{\alpha\beta}\big)\cdot\Psi\Big).
\endaligned
\end{equation} 
The last term in the right-hand side of \eqref{eq7-16-aout-2025} is also bounded by $\zetab^{-1}\Hcom_p[\Psi]$. The first term already fits the induction (here also remark that when $\Gamma(\del_{\alpha})_{\delta}^{\gamma} = 0$). For the second term in the right-hand side of \eqref{eq6-16-aout-2025}, we apply the induction on $[\mathscr{Z}^I,\widehat{\del_{\delta}}]\Psi$ and obtain
\be
\big|[\mathscr{Z}^I,\widehat{\del_{\delta}}](\widehat{Z}\Psi)\big|_{\vec{n}}
\lesssim_p \sum_{\beta}\sum_{\ord(J)<p\atop \rank(J)<k}
\big|\mathscr{Z}^J\widehat{\del_{\beta}}(\widehat{Z}\Psi)\big|_{\vec{n}} + \zetab^{-1}\Hcom_{p-1}[\Psi].
\ee
In order to bound the first term in the right-hand side of the above inequality, we observe that for any $|J|\leq p-1$,
\be
\mathscr{Z}^J\big(\widehat{\del_{\alpha}}(\widehat{Z}\Psi)\big) 
= \mathscr{Z}^J\big(\widehat{Z}(\widehat{\del_{\alpha}}\Psi)\big) 
+ \mathscr{Z}^J\big([\widehat{\del_{\alpha}},\widehat{Z}]\Psi\big).
\ee
The first term is bounded directly by $[\del\Psi]_{p,k}$. For the second term, we apply again \eqref{eq8-16-aout-2025} and obtain
\be
\aligned
\big|\mathscr{Z}^J\big([\widehat{\del_{\alpha}},\widehat{Z}]\Psi\big) \big|_{\vec{n}}
& \lesssim_p  
\Gamma(Z)_{\alpha}^{\beta}|\mathscr{Z}^J\widehat{\del_{\beta}}\Psi|_{\vec{n}} 
+ 
\big|\mathscr{Z}^J\big((g^{\alpha\beta}\pi[Z]^{\mu\nu}\pi[\del_{\delta}]_{\beta\nu}\del_{\mu}\cdot\del_{\alpha} + \pi[\del_{\delta}]^{\alpha\beta}\pi[Z]_{\alpha\beta})\cdot\Psi\big)\big|_{\vec{n}}
\\
& \lesssim_p |\mathscr{Z}^J\widehat{\del_{\beta}}\Psi|_{\vec{n}} + \zetab^{-1}\Hcom_{p-1}[\Psi].
\endaligned
\ee
We thus close the induction and conclude that \eqref{eq3-16-aout-2025} holds true.
\ese
\end{proof}

%------------------------------------------------------

\begin{proof}[Proof of Proposition~\ref{prop1-21-july-2025}]
\bse
Based on Lemma~\ref{lem1-16-aout-2025}, for any admissible operator $\mathscr{Z}^I$ of type $(p,k)$,
\be
\big|\mathscr{Z}^I(\widehat{\del_{\alpha}}\Psi)\big|_{\vec{n}}
\lesssim_p \big|\widehat{\del_{\alpha}}(\mathscr{Z}^I\Psi)\big|_{\vec{n}} 
+ \big|[\widehat{\del_{\alpha}},\mathscr{Z}^I]\Psi\big|_{\vec{n}}.
\ee
Then by \eqref{eq2-16-aout-2025}, we obtain \eqref{eq1-22-july-2025}. In the same manner, we have 
\be
\big|\widehat{\del_{\alpha}}(\mathscr{Z}^I\Psi)\big|_{\vec{n}} 
= \big|\mathscr{Z}^I(\widehat{\del_{\alpha}}\Psi)\big|_{\vec{n}} 
+ \big|[\widehat{\del_{\alpha}},\mathscr{Z}^I]\Psi\big|_{\vec{n}}.
\ee
Then, in view of \eqref{eq3-16-aout-2025}, we obtain \eqref{eq6-15-aout-2025}.
\ese
\end{proof}

}

%--------------------------------------------------------------------------------------------------------------------------------------

\subsection{ High-order estimates for Dirac commutators}
\label{section===11-3}

{ 

\paragraph{Dealing with translations, boosts, and rotations.}

For an admissible vector field $Z$, Proposition~\ref{prop1-16-july-2025} yields the decomposition 
\begin{equation}\label{eq5-21-aout-2025}
\aligned
\, [\widehat{Z},\opDirac]\Psi  
& =   - \frac{1}{2} \pi[Z]^{\alpha\beta} \del_\alpha \cdot\nabla_{\beta} \Psi
- \frac{1}{4}\pi[Z]^{\alpha\beta}\nabla_{\alpha}\del_{\beta}\cdot\Psi
- \frac{1}{4}[Z,W]\cdot\Psi
\\
& =  - \frac{1}{2}\pi[Z]^{\alpha\beta}\del_{\alpha}\cdot\widehat{\del_{\beta}}\Psi
- \frac{1}{8}g^{\mu\nu}\pi[Z]^{\alpha\beta}\del_{\alpha}\cdot\del_{\mu}
\cdot\nabla_{\nu}\del_{\beta}\cdot \Psi
\\
& \quad - \frac{1}{4}\pi[Z]^{\alpha\beta}\nabla_{\alpha}\del_{\beta}\cdot\Psi
- \frac{1}{4}[Z,W]\cdot\Psi.
\endaligned
\end{equation} 
The first term in this identity is quadratic, the last one depends on the wave gauge condition, and the remaining terms are cubic in nature. Moreover, all of these terms can be written as a single vector multiplied (via the Clifford product) by a spinor. In order to apply Proposition~\ref{prop1-15-july-2025}, we need to control their $|\cdot|_{p,k}$ norms. 

\begin{lemma}[Dealing with translations]
\label{lem2-16-july-2025}
Suppose that the conditions \eqref{eq-vect-condition} and \eqref{eq-USA-condition} hold for a sufficiently small $\eps_s$. Then, with the translation vector fields, one has the following three estimates: 
\begin{equation}\label{eq6-16-july-2025}
\zetab\big[\pi[\del_{\delta}]^{\alpha\beta}\del_{\alpha}\cdot\widehat{\del_{\beta}}\Psi\big]_{p,k}
\lesssim_p \sum_{p_1+p_2\leq p\atop k_1+k_2\leq k} 
|\del H|_{p_1,k_1}[\widehat{\del_{\beta}}\Psi]_{p_2,k_2}
+ \sum_{p_1+p_2+p_3\leq p\atop k_1+k_2+k_3\leq k}
|\del H|_{p_1,k_1}|H|_{p_2,k_2}[\widehat{\del_{\beta}}\Psi]_{p_3,k_3},
\end{equation}
\begin{equation}\label{eq7-16-july-2025}
\aligned
\zetab\big[\pi[\del_{\delta}]^{\alpha\beta}\nabla_{\alpha}\del_{\beta}\cdot\Psi\big]_{p,k}
& \lesssim_p  \sum_{p_1+p_2+p_3\leq p\atop k_1+k_2+k_3\leq k} |\del H|_{p_1,k_1}|\del H|_{p_2,k_2}[\Psi]_{p_3,k_3}
\\
& \quad + \sum_{p_1+p_2+p_3+p_4\leq p\atop k_1+k_2+k_3+k_4\leq k}|\del H|_{p_1,k_1}|\del H|_{p_2,k_2}|H|_{p_3,k_3}[\Psi]_{p_4,k_4},
\endaligned
\end{equation}
\begin{equation}\label{eq8-16-july-2025}
\aligned
\zetab\big[g^{\mu\nu}\pi[\del_{\delta}]^{\alpha\beta}\del_{\alpha}\cdot\del_{\mu}\cdot\nabla_{\nu}\del_{\beta}\cdot\Psi\big]_{p,k}
& \lesssim_p  \sum_{p_1+p_2+p_3\leq p\atop k_1+k_2+k_3\leq k}
|\del H|_{p_1,k_1}|\del H|_{p_2,k_2}[\Psi]_{p_3,k_3} 
\\
& \quad + \sum_{p_1+p_2+p_3+p_4\leq p\atop k_1+k_2+k_3+k_4\leq k}
|\del H|_{p_1,k_1}|\del H|_{p_2,k_2}|H|_{p_3,k_3}[\Psi]_{p_4,k_4}.
\endaligned
\end{equation} 
\end{lemma}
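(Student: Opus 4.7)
The three inequalities have a common structure: each left-hand side is of the form $\zetab\,[V\cdot\Phi]_{p,k}$, where $V$ is a vector field built from $\pi[\del_{\delta}]^{\alpha\beta}$ and (possibly) Christoffel symbols, and $\Phi$ is either $\widehat{\del_{\beta}}\Psi$ or $\Psi$. The plan is therefore to invoke the Clifford product estimate \eqref{eq3-16-july-2025} (or \eqref{eq3-23-july-2025} when several Clifford multiplications appear), which provides a $\zetab^{-1}$ factor that exactly compensates the prefactor $\zetab$ on the left-hand side, and then to bound the component norms $|V|_{p_i,k_i}$ using Corollary~\ref{cor1-21-aout-2025} together with the Christoffel estimate $|\Gamma_{\alpha\beta}^{\gamma}|_{p,k}\lesssim_p |\del H|_{p,k}+\sum |\del H|_{p_1,k_1}|H|_{p_2,k_2}$ (which follows from the definition of $\Gamma$, Lemma~\ref{lem1-11-july-2025}, and the Leibniz rule).

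For \eqref{eq6-16-july-2025}, I set $V^{\alpha}:=\pi[\del_{\delta}]^{\alpha\beta}\del_{\beta}$ (viewed as a vector whose coefficients are $\pi[\del_{\delta}]^{\alpha\beta}$) and apply \eqref{eq3-16-july-2025} to $V\cdot(\widehat{\del_{\beta}}\Psi)$. The $\zetab^{-1}$ on the right combines with the $\zetab$ on the left to give $1$, and $|V|_{p,k}=|\pi[\del_{\delta}]^{\alpha\beta}|_{p,k}$ is controlled by \eqref{eq2-21-aout-2025}; the quadratic and cubic pieces in that bound produce precisely the two sums on the right-hand side of \eqref{eq6-16-july-2025} after application of the product inequality \eqref{eq1-17-july-2025}.

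For \eqref{eq7-16-july-2025}, I write $\nabla_{\alpha}\del_{\beta}=\Gamma_{\alpha\beta}^{\gamma}\del_{\gamma}$, so the vector to be Clifford-multiplied on $\Psi$ has coefficients $W^{\gamma}:=\pi[\del_{\delta}]^{\alpha\beta}\Gamma_{\alpha\beta}^{\gamma}$. Applying \eqref{eq3-16-july-2025} to $W\cdot\Psi$ again cancels the prefactor $\zetab$, and then the high-order Leibniz rule gives
\[
|W^{\gamma}|_{p,k}\lesssim_p \sum_{p_1+p_2\leq p\atop k_1+k_2\leq k}|\pi[\del_{\delta}]^{\alpha\beta}|_{p_1,k_1}\,|\Gamma_{\alpha\beta}^{\gamma}|_{p_2,k_2}.
\]
Inserting \eqref{eq2-21-aout-2025} and the Christoffel bound, and using \eqref{eq-vect-condition} to absorb any higher powers of $|H|$ appearing in the corrections, produces exactly the cubic and quartic sums in \eqref{eq7-16-july-2025}.

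The estimate \eqref{eq8-16-july-2025} is the subtlest, because of the product of three Clifford multiplications $\del_{\alpha}\cdot\del_{\mu}\cdot\del_{\gamma}$ (after expanding $\nabla_{\nu}\del_{\beta}=\Gamma_{\nu\beta}^{\gamma}\del_{\gamma}$) acting on $\Psi$. Here I use \eqref{eq3-23-july-2025} with $n=3$ to control the triple Clifford product acting on an arbitrary spinor, which yields again a single $\zetab^{-1}$ loss and a factor $(1+|H|_{p_2,k_2})$. Then the scalar coefficient $g^{\mu\nu}\pi[\del_{\delta}]^{\alpha\beta}\Gamma_{\nu\beta}^{\gamma}$ is estimated exactly as for $W^{\gamma}$ above (using in addition \eqref{eq4-16-july-2025} to absorb derivatives of $g^{\mu\nu}$ into products of $|H|_{p_i,k_i}$), and the high-order product inequality \eqref{eq1-17-july-2025} assembles these pieces into the two sums on the right-hand side of \eqref{eq8-16-july-2025}.

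The main technical obstacle that I expect is bookkeeping: one must verify that in every application of Leibniz the admissible partitions $I=I_1\odot I_2\odot\cdots$ of types $(p_i,k_i)$ preserve the overall type $(p,k)$, and that the extra $|H|$-factors coming from \eqref{eq2-21-aout-2025}, from Lemma~\ref{lem1-11-july-2025}, and from \eqref{eq3-23-july-2025} can be merged without producing more than the quartic terms allowed on the right. The condition \eqref{eq-vect-condition} is crucial here: it permits absorbing any $|H|_{p_i,k_i}$ with $p_i\leq [p/2]$ into a universal constant, so that only one ``unshared'' derivative of $H$ can appear with high order, exactly matching the structure of the stated inequalities.
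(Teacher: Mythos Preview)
Your proposal is correct and follows essentially the same route as the paper: for \eqref{eq6-16-july-2025} the paper also applies \eqref{eq3-16-july-2025} together with \eqref{eq2-21-aout-2025}; for \eqref{eq7-16-july-2025} it likewise expands $\nabla_{\alpha}\del_{\beta}=\Gamma_{\alpha\beta}^{\gamma}\del_{\gamma}$, bounds the vector $\pi[\del_{\delta}]^{\alpha\beta}\Gamma_{\alpha\beta}^{\gamma}\del_{\gamma}$, and applies \eqref{eq3-16-july-2025}; and for \eqref{eq8-16-july-2025} it also factors the triple Clifford product $\del_{\alpha}\cdot\del_{\mu}\cdot\del_{\gamma}$ away from the scalar coefficient, bounds the resulting spinor $\Phi=g^{\mu\nu}\pi[\del_{\delta}]^{\alpha\beta}\Gamma_{\nu\beta}^{\gamma}\Psi$ via \eqref{eq1-17-july-2025}, and then applies \eqref{eq3-23-july-2025}. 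Your bookkeeping remarks about \eqref{eq-vect-condition} absorbing low-order $|H|$ factors are exactly how the paper keeps the right-hand sides at most quartic.
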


\begin{proof} 
\bse
To derive \eqref{eq6-16-july-2025}, we apply \eqref{eq3-16-july-2025} in combination with \eqref{eq2-21-aout-2025}. To establish \eqref{eq7-16-july-2025}, we write
\be
\aligned
\pi[\del_{\delta}]^{\alpha\beta}\nabla_{\alpha}\del_{\beta} 
& =  \pi[\del_{\delta}]^{\alpha\beta}\Gamma_{\alpha\beta}^{\gamma}
=\frac{1}{2}g^{\alpha\mu}g^{\beta\nu}g^{\gamma\kappa}
\big(\del_{\alpha}H_{\beta\kappa} 
+ \del_{\beta}H_{\alpha\kappa} - \del_{\kappa}H_{\alpha\beta}\big)\del_{\delta}H_{\mu\nu}\del_{\gamma},
\endaligned
\ee
so that, thanks to \eqref{eq-vect-condition} and \eqref{eq2-21-aout-2025},
\begin{equation}
|\pi[\del_{\delta}]^{\alpha\beta}\nabla_{\alpha}\del_{\beta}|_{p,k}
\lesssim_p 
\sum_{p_1+p_2\leq p\atop k_1+k_2\leq k}|\del H|_{p_1,k_1}|\del H|_{p_2,k_2} 
+ \sum_{p_1+p_2+p_3\leq p\atop k_1+k_2+k_3\leq k}|\del H|_{p_1,k_1}|\del H|_{p_2,k_2}|H|_{p_3,k_3}.
\end{equation}
Then by the same argument used for \eqref{eq6-16-july-2025}, we thus obtain \eqref{eq7-16-july-2025}. 

To derive \eqref{eq8-16-july-2025}, we write 
\be
g^{\mu\nu}\pi[\del_{\delta}]^{\alpha\beta}\del_{\alpha}\cdot\del_{\mu}\cdot\nabla_{\nu}\del_{\beta}\cdot\Psi
=
\frac{1}{2}\del_{\alpha}\cdot\del_{\mu}\cdot\del_{\gamma}\cdot \big(g^{\mu\nu}\pi[\del_{\delta}]^{\alpha\beta}g^{\gamma\kappa}
(\del_{\nu}g_{\kappa\beta} + \del_{\beta}g_{\nu\kappa} - \del_{\kappa}g_{\nu\beta})\Psi\big).
\ee
Introducing the short-hand notation
\be
\Phi := g^{\mu\nu}\pi[\del_{\gamma}]^{\alpha\beta}g^{\delta\kappa}
(\del_{\nu}g_{\kappa\beta} + \del_{\beta}g_{\nu\kappa} - \del_{\kappa}g_{\nu\beta})\Psi,
\ee
we obtain 
\begin{equation}\label{eq3-18-july-2025}
\aligned
\, [\Phi]_{p,k}\lesssim_p 
& \sum_{p_1+p_2+p_3\leq p\atop k_1+k_2+k_3\leq k}
|\del H|_{p_1,k_1}|\del H|_{p_2,k_2}[\Psi]_{p_3,k_3}
\\
& \quad +\sum_{p_1+p_2+p_3+p_4\leq p\atop k_1+k_2+k_3+k_4\leq k}
|\del H|_{p_1,k_1}|\del H|_{p_2,k_2}|H|_{p_3,k_3}[\Psi]_{p_4,k_4}. 
\endaligned
\end{equation}
On the other hand, thanks to \eqref{eq3-23-july-2025}, we have 
\begin{equation}\label{eq4-19-july-2025}
\big|\del_{\alpha}\cdot\del_{\beta}\cdot\del_{\gamma}\cdot\Phi\big|_{\vec{n}}
\lesssim_p \zetab^{-1}\sum_{p_0+p_1\leq p\atop k_0+k_1\leq k}(1+|H|_{p_0,k_0})[\Phi]_{p_1,k_1}.
\end{equation}
Combining the above two estimates, we arrive at the third inequality \eqref{eq8-16-july-2025}.
\ese
\end{proof}

%---------------------------------------------------

Similar estimates hold for the boosts $Z = L_d$ and the rotations $Z = \Omega_{ab}$, as follows. 

\begin{lemma}[Dealing with Lorentz boosts and spatial rotations]
Suppose that the conditions \eqref{eq-vect-condition} and \eqref{eq-USA-condition} hold for a sufficiently small $\eps_s$. Then for $Z = L_d,\Omega_{ab}$ one has the three estimates 
\begin{equation}\label{eq6-18-july-2025}
\aligned
\zetab\big[\pi[Z]^{\alpha\beta}\del_{\alpha}\cdot \widehat{\del_{\beta}}\Psi\big]_{p,k} 
& \lesssim_p  \sum_{p_1+p_2\leq p\atop k_1+k_2\leq k}
|H|_{p+1,k+1}[\widehat{\del_{\beta}}\Psi]_{p_2,k_2}
\\
& \quad +\sum_{p_1+p_2+p_3\leq p\atop k_1+k_2+k_3\leq k}
|H|_{p_1+1,k_1+1}|H|_{p_2,k_2}[\widehat{\del_{\beta}}\Psi]_{p_3,k_3},
\endaligned
\end{equation}
\begin{equation}\label{eq7-18-july-2025}
\aligned
\zetab\big[\pi[Z]^{\alpha\beta}\nabla_{\alpha}\del_{\beta}\cdot \Psi\big]_{p,k}
& \lesssim_p  \sum_{p_1+p_2+p_3\leq p\atop k_1+k_2+k_3\leq k}
|H|_{p_1+1,k_1+1}|\del H|_{p_2,k_2}[\Psi]_{p_3,k_3}
\\
& \quad +\sum_{p_1+p_2+p_3+p_4\leq p\atop k_1+k_2+k_3+k_4\leq k}
|H|_{p_1+1,k_1+1}|\del H|_{p_2,k_2}|H|_{p_3,k_3}[\Psi]_{p_4,k_4},
\endaligned
\end{equation}
\begin{equation}\label{eq8-18-july-2025}
\aligned
\zetab\big[g^{\mu\nu}\pi[Z]^{\alpha\beta}\del_{\alpha}\cdot\del_{\mu}\cdot\nabla_{\nu}\del_{\beta} \cdot\Psi \big]_{p,k}
& \lesssim_p  \sum_{p_1+p_2+p_3\leq p\atop k_1+k_2+k_3\leq k}
|H|_{p_1+1,k_1+1}|\del H|_{p_2,k_2}[\Psi]_{p_3,k_3}
\\
& \quad +\sum_{p_1+p_2+p_3+p_4\leq p\atop k_1+k_2+k_3+k_4\leq k}
|H|_{p_1+1,k_1+1}|\del H|_{p_2,k_2}|H|_{p_3,k_3}[\Psi]_{p_4,k_4}. 
\endaligned
\end{equation}
\end{lemma}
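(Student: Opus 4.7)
The plan is to mirror the three-step argument of Lemma~\ref{lem2-16-july-2025}, with the single substitution that the translation bound $|\pi[\del_\delta]^{\alpha\beta}|_{p,k}\lesssim |\del H|_{p,k}+\ldots$ from \eqref{eq2-21-aout-2025} is replaced by the boost/rotation bound $|\pi[Z]^{\alpha\beta}|_{p,k}\lesssim |H|_{p+1,k+1}+\sum|H|_{p_1+1,k_1+1}|H|_{p_2,k_2}$ from \eqref{eq4-21-aout-2025}. This explains the replacement of $|\del H|_{p_1,k_1}$ by $|H|_{p_1+1,k_1+1}$ in the right-hand sides of \eqref{eq6-18-july-2025}--\eqref{eq8-18-july-2025} relative to \eqref{eq6-16-july-2025}--\eqref{eq8-16-july-2025}.

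First, for \eqref{eq6-18-july-2025}, I will regard $\pi[Z]^{\alpha\beta}\del_\alpha\cdot\widehat{\del_\beta}\Psi$ as the Clifford product (summed over $\beta$) of the vector field $X^{(\beta)}:=\pi[Z]^{\alpha\beta}\del_\alpha$ with the spinor $\widehat{\del_\beta}\Psi$. Then the high-order Clifford product estimate \eqref{eq3-16-july-2025} produces the weight $\zetab^{-1}$ (absorbed on the left) together with sums of $[\widehat{\del_\beta}\Psi]_{p_2,k_2}$ and $|X^{(\beta)}|_{\cdot,\cdot}$, and possibly one additional factor of $|H|_{p_3,k_3}$; inserting the $|\pi[Z]^{\alpha\beta}|_{p,k}$ control from \eqref{eq4-21-aout-2025} and using the smallness assumption \eqref{eq-vect-condition} to collapse higher-order products of $H$ into the claimed two types of terms yields \eqref{eq6-18-july-2025}.

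Second, for \eqref{eq7-18-july-2025}, I will expand $\pi[Z]^{\alpha\beta}\nabla_\alpha\del_\beta = \pi[Z]^{\alpha\beta}\Gamma_{\alpha\beta}^\gamma\del_\gamma$ with $\Gamma_{\alpha\beta}^\gamma=\tfrac12 g^{\gamma\kappa}(\del_\alpha H_{\beta\kappa}+\del_\beta H_{\alpha\kappa}-\del_\kappa H_{\alpha\beta})$, so that $\pi[Z]^{\alpha\beta}\nabla_\alpha\del_\beta$ is scalar-coefficient times $\del_\gamma$. A Leibniz expansion together with Lemma~\ref{lem1-11-july-2025} and Corollary~\ref{cor1-21-aout-2025} gives
\begin{equation*}
|\pi[Z]^{\alpha\beta}\nabla_\alpha\del_\beta|_{p,k}\lesssim_p
\sum_{p_1+p_2\le p\atop k_1+k_2\le k}|H|_{p_1+1,k_1+1}|\del H|_{p_2,k_2}
+\sum_{p_1+p_2+p_3\le p\atop k_1+k_2+k_3\le k}|H|_{p_1+1,k_1+1}|\del H|_{p_2,k_2}|H|_{p_3,k_3},
\end{equation*}
and one final application of \eqref{eq3-16-july-2025} to the product ``vector $\cdot$ spinor'' extracts the remaining $\zetab$ weight and generates the spinor factor $[\Psi]_{p_3,k_3}$, delivering \eqref{eq7-18-july-2025}.

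Third, for \eqref{eq8-18-july-2025}, the key algebraic move is exactly as in the proof of \eqref{eq8-16-july-2025}: collapse the triple Clifford product into $\del_\alpha\cdot\del_\mu\cdot\del_\gamma\cdot\Phi$ where
\begin{equation*}
\Phi:=\tfrac12\,g^{\mu\nu}\pi[Z]^{\alpha\beta}g^{\gamma\kappa}\bigl(\del_\nu H_{\kappa\beta}+\del_\beta H_{\nu\kappa}-\del_\kappa H_{\nu\beta}\bigr)\Psi
\end{equation*}
is a scalar-coefficient spinor. The scalar product estimate \eqref{eq1-17-july-2025}, combined with \eqref{eq4-21-aout-2025} and Lemma~\ref{lem1-11-july-2025}, bounds $[\Phi]_{p,k}$ by the two sums on the right of \eqref{eq8-18-july-2025}; invoking the triple-Clifford bound \eqref{eq3-23-july-2025} then absorbs the outer three-fold Clifford product into a single loss of $\zetab^{-1}$, up to an extra $(1+|H|_{p_0,k_0})$ factor that is harmless under \eqref{eq-vect-condition}. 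The main technical difficulty throughout is combinatorial bookkeeping: each admissible operator $\mathscr{Z}^I$ of type $(p,k)$ distributes by Leibniz over four factors (two powers of $g^{-1}$, one $H$-derivative, one $\del H$, and $\Psi$) and both budgets $p$ and $k$ must be tracked simultaneously, with \eqref{eq-vect-condition} used to absorb all but one ``large'' factor in every partition.
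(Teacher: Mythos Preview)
Your proposal is correct and follows essentially the same approach as the paper: for each of the three estimates you invoke \eqref{eq3-16-july-2025} (or \eqref{eq3-23-july-2025} for the triple Clifford product) together with the deformation-tensor bound \eqref{eq4-21-aout-2025} from Corollary~\ref{cor1-21-aout-2025}, expanding $\nabla_\alpha\del_\beta$ via Christoffel symbols and defining the auxiliary spinor $\Phi$ exactly as in the proof of \eqref{eq8-16-july-2025}. The paper's proof is slightly terser (e.g.\ it records only the quadratic term in the intermediate bound on $|\pi[Z]^{\alpha\beta}\nabla_\alpha\del_\beta|_{p,k}$, absorbing the cubic correction via \eqref{eq-vect-condition}), but the argument is the same.
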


\begin{proof}
\bse
The proof is similar to the one of Lemma~\ref{lem2-16-july-2025}. For \eqref{eq6-18-july-2025} we apply \eqref{eq3-16-july-2025} in combination with Corollary~\ref{cor1-21-aout-2025}. For \eqref{eq7-18-july-2025}, we have
\be
\pi[Z]^{\alpha\beta}\nabla_{\alpha}\del_{\beta} 
= \frac{1}{2}\pi[Z]^{\alpha\beta}g^{\gamma\kappa}
\big(\del_{\alpha}H_{\beta\kappa} + \del_{\beta}H_{\alpha\kappa} - \del_{\kappa}H_{\alpha\beta}\big)\del_{\gamma},
\ee
so that 
\be
\big|\pi[Z]^{\alpha\beta}\nabla_{\alpha}\del_{\beta}\big|_{p,k}
\lesssim_p \sum_{p_1+p_2\leq p\atop k_1+k_2\leq k}|H|_{p_1+1,k_1+1}|\del H|_{p_2,k_2}. 
\ee
This leads us to \eqref{eq7-18-july-2025}. 

On the other hand, the derivation of \eqref{eq8-18-july-2025} is similar to the one of \eqref{eq8-16-july-2025}. We also set 
\be
\Phi = g^{\mu\nu}\pi[Z]^{\alpha\beta}g^{\gamma\kappa}
(\del_{\nu}g_{\kappa\beta} + \del_{\beta}g_{\nu\kappa} - \del_{\kappa}g_{\nu\beta})\Psi
\ee
hence, thanks to \eqref{eq4-21-aout-2025},
\begin{equation}
[\Phi]_{p,k}\lesssim_p \sum_{p_1+p_2+p_3\leq p\atop k_1+k_2+k_3\leq k}
|H|_{p_1+1,k_1+1}|\del H|_{p_2,k_2}[\Psi]_{p_3,k_3}.
\end{equation}
This together with \eqref{eq3-23-july-2025} implies \eqref{eq8-18-july-2025}.
\ese
\end{proof} 

%--------------------------------------------------------------------------------------

\paragraph{High-order estimates for Dirac commutators.} 

Finally, we reach the following conclusion.

\begin{proposition}[High-order estimates for Dirac commutators]
\label{prop1-23-july-2025}
Assume that the conditions \eqref{eq-vect-condition} and \eqref{eq-USA-condition} hold for a sufficiently small $\eps_s$. Consider any sufficiently regular spinor field $\Psi$ and any admissible vector field $Z$. Then one has 
\begin{equation}\label{eq11-18-july-2025}
\aligned
\zetab\big[[Z,\opDirac]\Psi\big]_{p,k}& \lesssim_p  
\sum_{p_1+p_2\leq p\atop k_1+k_2\leq k}[\del\Psi]_{p_1,k_1}|H|_{p_2+1,k_2+1}
+ \Hcom_{p}[\Psi] + \Hwave_{p}[\Psi]
\endaligned
\end{equation}
and, when $Z = \del_{\delta}$,
\begin{equation}\label{eq12-18-july-2025}
\aligned
\zetab\big[[\del_{\delta},\opDirac]\Psi\big]_{p,k}
& \lesssim_p   \sum_{p_1+p_2\leq p\atop k_1+k_2\leq k} 
[\del\Psi]_{p_1,k_1}|\del H|_{p_2,k_2}
+ \Hcom_{p}[\Psi] + \Hwave_{p,k}[\del,\Psi],
\endaligned
\end{equation}
in which 
\begin{equation}
\big|\Hcom_{p}[\Psi]\big|
\lesssim_p \sum_{p_1+p_2+p_3\leq p}[\Psi]_{p_1}|H|_{p_2+1}|\del H|_{p_3},
\end{equation}
and
\begin{equation}
\big|\Hwave_{p}[\Psi]\big|
\lesssim_p\sum_{p_1+p_2\leq p}[\Psi]_{p_1}|W|_{p_2+1}
+ \sum_{p_1+p_2+p_3\leq p}[\Psi]_{p_1}|W|_{p_2+1}|H|_{p_3},
\end{equation}
\begin{equation}
\big|\Hwave_{p,k}[\del,\Psi]\big|
\lesssim_p\sum_{p_1+p_2\leq p\atop k_1+k_2\leq k}[\Psi]_{p_1,k_1}|\del W|_{p_2,k_2}
+ \sum_{p_1+p_2+p_3\leq p}[\Psi]_{p_1}|\del W|_{p_2}|H|_{p_3}. 
\end{equation}
\end{proposition}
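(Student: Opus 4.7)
The plan is to insert the decomposition \eqref{eq5-21-aout-2025} of Proposition~\ref{prop1-16-july-2025} into $[Z,\opDirac]\Psi=[\widehat{Z},\opDirac]\Psi$ (equality by the convention \eqref{eq3-04-july-2025}) and bound each of the four resulting pieces separately in the $[\cdot]_{p,k}$ norm. Each piece has already been estimated in the three preceding lemmas of this subsection, so the argument is essentially a synthesis. The factor $\zetab$ appearing on the left-hand side of \eqref{eq11-18-july-2025}--\eqref{eq12-18-july-2025} is the same $\zetab^{-1}$ that enters systematically through the Clifford-product estimate \eqref{eq3-23-july-2025} and through the basic product rule \eqref{eq3-16-july-2025}; moving it to the left isolates the purely algebraic growth due to the metric coefficients.

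For $Z\in\{L_d,\Omega_{ab}\}$, the quadratic term $-\tfrac12\pi[Z]^{\alpha\beta}\del_\alpha\cdot\widehat{\del_\beta}\Psi$ is controlled via \eqref{eq6-18-july-2025}, which produces exactly the main contribution $\sum[\del\Psi]_{p_1,k_1}|H|_{p_2+1,k_2+1}$ on the right-hand side of \eqref{eq11-18-july-2025} after reinterpreting $[\widehat{\del_\beta}\Psi]_{p,k}$ via the notation \eqref{equa-notation-spineur}. The two cubic Christoffel terms $\pi[Z]^{\alpha\beta}\nabla_\alpha\del_\beta\cdot\Psi$ and $g^{\mu\nu}\pi[Z]^{\alpha\beta}\del_\alpha\cdot\del_\mu\cdot\nabla_\nu\del_\beta\cdot\Psi$ are handled by \eqref{eq7-18-july-2025} and \eqref{eq8-18-july-2025}, respectively. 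The cubic sums of the form $\sum|H|_{p_1+1}|\del H|_{p_2}[\Psi]_{p_3}$ appearing there are precisely what the definition of $\Hcom_p[\Psi]$ contains, while the quartic contributions carry one low-order factor $|H|$ that, under the smallness assumption \eqref{eq-vect-condition}, is $\lesssim 1$ and is therefore absorbed into the cubic structure.

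It remains to treat the gauge term $-\tfrac14[Z,W]\cdot\Psi$. Writing $Z=Z^\beta\del_\beta$ with $Z^\beta$ at most linear in $(t,x)$, the components of the Lie bracket $[Z,W]$ equal $ZW^\alpha-W^\beta\del_\beta Z^\alpha$, so $|[Z,W]|_{p,k}\lesssim|W|_{p+1,k+1}$. Applying the product rule \eqref{eq3-16-july-2025} with $X=[Z,W]$ yields
\[
\zetab\,\big[[Z,W]\cdot\Psi\big]_{p,k}\lesssim_p\sum_{p_1+p_2\le p\atop k_1+k_2\le k}[\Psi]_{p_1,k_1}|W|_{p_2+1,k_2+1}+\sum_{p_1+p_2+p_3\le p}[\Psi]_{p_1}|W|_{p_2+1}|H|_{p_3},
\]
which matches $\Hwave_p[\Psi]$ exactly. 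Combining the four estimates then gives \eqref{eq11-18-july-2025}.

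For the refined bound \eqref{eq12-18-july-2025} corresponding to $Z=\del_\delta$, the same scheme applies verbatim, but now using \eqref{eq6-16-july-2025}--\eqref{eq8-16-july-2025} together with the sharper deformation-tensor identities \eqref{eq1-21-aout-2025}--\eqref{eq2-21-aout-2025}. The crucial structural gain is that $\pi[\del_\delta]_{\alpha\beta}=\del_\delta H_{\alpha\beta}$ already carries an explicit derivative of $H$, so the quadratic term now yields $\sum|\del H|_{p_1,k_1}[\del\Psi]_{p_2,k_2}$ in place of $\sum|H|_{p_1+1,k_1+1}[\del\Psi]_{p_2,k_2}$; simultaneously $[\del_\delta,W]=\del_\delta W$, so the gauge contribution collapses to $\Hwave_{p,k}[\del,\Psi]$. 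The main technical issue, and the only real obstacle in the whole proof, is the multi-index bookkeeping required to verify that every quartic contribution produced along the way can indeed be absorbed into the cubic $\Hcom_p$ structure: this forces one to distribute the multi-indices $(p_i,k_i)$ so that at least one $|H|$ factor has order $\le [p/2]$, and then to invoke \eqref{eq-vect-condition} on that factor.
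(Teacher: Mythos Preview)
Your proposal is correct and follows essentially the same approach as the paper: both proofs insert the four-term decomposition \eqref{eq5-21-aout-2025}, estimate the quadratic and two cubic Christoffel pieces via Lemma~\ref{lem2-16-july-2025} (for $Z=\del_\delta$) and its boost/rotation analogue \eqref{eq6-18-july-2025}--\eqref{eq8-18-july-2025}, and handle the gauge term $[Z,W]\cdot\Psi$ directly via the Lie-bracket bound $|[Z,W]|_{p,k}\lesssim |W|_{p+1,k+1}$ (resp.\ $|\del W|_{p,k}$) combined with \eqref{eq3-16-july-2025}. Your write-up is in fact more detailed than the paper's, which is quite terse and essentially only spells out the gauge term before declaring the rest ``direct in view of Lemma~\ref{lem2-16-july-2025}''.
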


\begin{proof}
\bse\label{eqs1-07-oct-2025}
We need to bound the term depending upon the generalized wave gauge vector $W$. Observe that
\begin{equation}
|[Z,W]|_{p,k}\lesssim_p
\begin{cases}
|W|_{p+1,k+1},\quad &Z = L_d,\Omega_{ab}, 
\\
|\del W|_{p,k},\quad &Z = \del_{\delta}.
\end{cases}
\end{equation}
Hence, thanks to \eqref{eq3-16-july-2025} we find 
\be
\aligned
\zeta\big|[Z,W]\cdot\Psi\big|_{p,k}
& \lesssim_p \sum_{p_1+p_2\leq p\atop k_1+k_2\leq k}[\Psi]_{p_1,k_1}\big|[Z,W]\big|_{p_2,k_2}
+ \sum_{p_1+p_2+p_3\leq p\atop k_1+k_2+k_3\leq k}[\Psi]_{p_1,k_1}\big|[Z,W]\big|_{p_2,k_2}|H|_{p_3,k_3}
\\
& \lesssim_p 
\left\{
\aligned
& \sum_{p_1+p_2\leq p\atop k_1+k_2\leq k}[\Psi]_{p_1,k_1}|W|_{p_2+1,k_2+1}
+ \sum_{p_1+p_2+p_3\leq p\atop k_1+k_2+k_3\leq k}[\Psi]_{p_1,k_1}|W|_{p_2+1,k_2+1}|H|_{p_3,k_3},
\\
& \sum_{p_1+p_2\leq p\atop k_1+k_2\leq k}[\Psi]_{p_1,k_1}|\del W|_{p_2,k_2}
+ \sum_{p_1+p_2+p_3\leq p\atop k_1+k_2+k_3\leq k}[\Psi]_{p_1,k_1}\big|\del W|_{p_2,k_2}|H|_{p_3,k_3}.
\endaligned
\right.
\endaligned
\ee
The estimate \eqref{eq12-18-july-2025} is then direct in view of Lemma~\ref{lem2-16-july-2025}. We observe also that it can be covered by the right-hand side of \eqref{eq11-18-july-2025}.
\ese
\end{proof}

}

%---------------------------------------------------------------------------------------------------------------------------------------

\subsection{ High-order commutator estimates for the Dirac operator}
\label{section===11-4}

{ 

\begin{proposition}[High-order commutator estimates for the Dirac operator]
\label{prop1-19-july-2025}
Assume that the metric satisfies 
\begin{equation}\label{eq-com-condition}
[H]_{[p/2]+1}\lesssim \zeta
\end{equation}
and the condition \eqref{eq-USA-condition} holds for a sufficiently small $\eps_s$. Then for all $I$ of type $(p,k)$, one has 
\begin{equation}
\aligned
\zetab|[\mathscr{Z}^I,\opDirac]\Psi|_{\vec{n}}
& \lesssim_p  \sum_{p_1+p_2\leq p\atop k_1+k_2\leq k}[\del\Psi]_{p_1-1,k_1-1}|H|_{p_2+1,k_2+1} 
+\sum_{p_1+p_2\leq p\atop k_1+k_2\leq k}[\del\Psi]_{p_1-1,k_1}|\del H|_{p_2,k_2}
\\
& \quad + \Hcom_{p-1}[\Psi] 
+ \Hwave_{p-1}[\Psi],
\endaligned
\end{equation}
in which the first term vanishes when $k_1=0$ and the second term vanishes when $p_1=k_1$.
\end{proposition}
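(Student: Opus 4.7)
The plan is to argue by induction on $p=\ord(I)$. The base case $p=1$ is precisely Proposition~\ref{prop1-23-july-2025}: the estimate \eqref{eq11-18-july-2025} yields the ``boost/rotation'' term $[\del\Psi]_{0,0}|H|_{2,2}$ and, when the single operator is a translation, the finer estimate \eqref{eq12-18-july-2025} yields the ``translation'' term $[\del\Psi]_{0,0}|\del H|_{1,0}$, which matches the hypothesis that the first summand should vanish when $k_1=0$. For the inductive step, given $\mathscr{Z}^{I'} = \mathscr{Z}^I \widehat{Z}$ of type $(p+1,k')$ (with $k'=k$ if $Z=\del_\delta$ and $k'=k+1$ otherwise) I would use the Leibniz-type identity
\begin{equation*}
[\mathscr{Z}^I\widehat{Z},\opDirac]\Psi \;=\; \mathscr{Z}^I\big([\widehat{Z},\opDirac]\Psi\big) + [\mathscr{Z}^I,\opDirac]\big(\widehat{Z}\Psi\big),
\end{equation*}
so that the second term is immediately controlled by the inductive hypothesis applied to $\widehat{Z}\Psi$, producing contributions of the form $[\del\widehat{Z}\Psi]_{p-1,\,\cdot}\,|H|_{\,\cdot}$ and $[\del\widehat{Z}\Psi]_{p-1,\,\cdot}\,|\del H|_{\,\cdot}$ which I absorb into $[\del\Psi]_{p,\,\cdot}$ modulo the $\Hcom$ and $\Hwave$ reservoirs already present.

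The first term is the heart of the argument. I would insert the decomposition \eqref{eq5-21-aout-2025} for $[\widehat{Z},\opDirac]\Psi$, apply $\mathscr{Z}^I$ term by term, and distribute the derivatives via the Leibniz-type partitions $I=I_1\odot I_2$ used throughout Section~\ref{section===84}. The quadratic principal term $\pi[Z]^{\alpha\beta}\del_\alpha\cdot\widehat{\del_\beta}\Psi$ is handled by Proposition~\ref{prop1-15-july-2025} combined with Corollary~\ref{cor1-21-aout-2025}: the high-order norm of $\pi[Z]^{\alpha\beta}$ is bounded by $|H|_{p_2+1,k_2+1}$ when $Z$ is a boost or rotation, and by $|\del H|_{p_2,k_2}$ with no rank shift when $Z$ is a translation, which is exactly the dichotomy encoded by the two summands in the statement. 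To convert the resulting $[\widehat{\del_\beta}\mathscr{Z}^J\Psi]$-type factors into $[\del\Psi]_{p_1-1,k_1-\,\cdot}$ I would invoke Proposition~\ref{prop1-21-july-2025}, whose commutator correction is absorbed by $\zetab^{-1}\Hcom_{p-1}[\Psi]$. The remaining pieces in \eqref{eq5-21-aout-2025}, namely $\pi[Z]^{\alpha\beta}\nabla_\alpha\del_\beta\cdot\Psi$, $g^{\mu\nu}\pi[Z]^{\alpha\beta}\del_\alpha\cdot\del_\mu\cdot\nabla_\nu\del_\beta\cdot\Psi$, and $[Z,W]\cdot\Psi$, are cubic or involve $W$; the high-order versions of \eqref{eq6-18-july-2025}--\eqref{eq8-18-july-2025}, together with the Clifford-multiplication bound \eqref{eq3-23-july-2025}, place them directly into $\Hcom_{p-1}[\Psi]+\Hwave_{p-1}[\Psi]$.

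The bookkeeping on the rank index is the delicate part and deserves explicit attention. A boost or rotation $Z$ raises both $\ord$ and $\rank$ by one when it hits the metric, while a translation raises only $\ord$; the sharper form of \eqref{eq12-18-july-2025} (with $|\del H|_{p,k}$ rather than $|H|_{p+1,k+1}$) must therefore be tracked through every admissible partition. Concretely, in every partition $I'=I_1\odot I_2$, if every vector field landing on $H$ is a translation then that block contributes $|\del H|_{p_2,k_2}$ and the corresponding rank index $k_1$ can stay at $0$; otherwise the block contributes $|H|_{p_2+1,k_2+1}$, forcing $k_1\geq 1$. This precisely yields the two summands in the conclusion, together with the parenthetical vanishing statements. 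The condition \eqref{eq-com-condition} is used to discard $\big((s/t)^{-2}|H|\big)^k$-type factors that appeared in \eqref{eq8-22-aout-2025} and to keep applicable the scalar product rules \eqref{eq1-17-july-2025} and the high-order bounds on $H^{\alpha\beta}$ from Lemma~\ref{lem1-11-july-2025}.

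The main obstacle I anticipate is controlling the ``Proposition~\ref{prop1-21-july-2025} exchange'' uniformly in $(p,k)$: each time one swaps $\widehat{\del_\alpha}$ past $\mathscr{Z}^J$ there is a cubic remainder, and these remainders multiply across the admissible partitions of the decomposition \eqref{eq5-21-aout-2025}. To keep the final form compact I would verify once and for all, by a short induction internal to the proof, that the cumulative remainders from these swaps fit inside $\Hcom_{p-1}[\Psi]$, relying on the smallness hypothesis \eqref{eq-com-condition} to absorb powers of $|H|_{[p/2]+1}$ into products already counted. Once that structural lemma is in hand, the induction closes by collecting all contributions and observing that the remaining estimates reduce to the product rules of Section~\ref{section===11-1} together with the deformation-tensor bounds of Corollary~\ref{cor1-21-aout-2025}.
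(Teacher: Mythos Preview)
Your plan is essentially the paper's proof: induction on $|I|$, the Leibniz splitting
\[
[\mathscr{Z}^I\widehat{Z},\opDirac]\Psi = \mathscr{Z}^I\big([\widehat{Z},\opDirac]\Psi\big) + [\mathscr{Z}^I,\opDirac]\big(\widehat{Z}\Psi\big),
\]
inductive hypothesis on the second term (with the conversion $[\del\widehat{Z}\Psi]_{p-1,k-1}\lesssim[\del\Psi]_{p,k}+\zetab^{-1}\Hcom_{p-1}[\Psi]$ via Lemma~\ref{lem2-18-july-2025}), and Proposition~\ref{prop1-23-july-2025} on the first term. Two minor cleanups: (i) for the first term you do not need to reopen the decomposition \eqref{eq5-21-aout-2025}---you already packaged it as Proposition~\ref{prop1-23-july-2025}, so just quote \eqref{eq11-18-july-2025} for boosts/rotations and \eqref{eq12-18-july-2025} for translations and re-index; (ii) your bookkeeping paragraph misstates the mechanism: the dichotomy between the two sums is not about ``every vector field landing on $H$ being a translation'' in a partition, but about the type of the single $Z$ being peeled off at each inductive step (translation $\Rightarrow$ second sum, boost/rotation $\Rightarrow$ first sum), as you correctly say one paragraph earlier. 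Also, the reference to \eqref{eq8-22-aout-2025} is spurious---that estimate belongs to Section~\ref{section=N7}; condition \eqref{eq-com-condition} is used here to absorb the nested products $\zetab^{-1}\Hcom_{p_1}[\Psi]\,|H|_{p_2+1}$ arising from the inductive step back into $\Hcom_{p}[\Psi]$.
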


\begin{proof} 
\bse
Observe that \eqref{eq-com-condition} covers directly \eqref{eq-vect-condition}. We proceed by induction on the length $|I|$. When $|I| = 1$, we have $\ord(I) = 1$ and $\rank(I)=1$ or $\rank(I)=0$. These two cases are checked easily in view of \eqref{eq11-18-july-2025} and \eqref{eq12-18-july-2025}. Next, we consider an admissible operator $\mathscr{Z}^{I'} = \mathscr{Z}^IZ$ with $I$ of type $(p,k)$. We distinguish between two cases, as follows. 

\vskip.15cm

$\bullet$ When $Z = \del_{\delta}$, then $I'$ is of type $(p+1,k)$ and we have 
\be
\aligned
\zetab\big|[\mathscr{Z}^IZ,\opDirac]\Psi\big|_{\vec{n}} 
& \lesssim \zetab\big|\mathscr{Z}^I\big([\del_{\delta},\opDirac]\Psi\big)\big|_{\vec{n}} 
+ \zetab\big|[\mathscr{Z}^I,\opDirac](\del_{\delta}\Psi)\big|_{\vec{n}}
\\
& \lesssim_p \zetab\big[[\del_{\delta},\opDirac]\Psi\big]_{p,k} 
+ \zetab\big|[\mathscr{Z}^I,\opDirac](\del_{\delta}\Psi)\big|_{\vec{n}} =: A, 
\endaligned
\ee
with 
\be
\aligned
A & \lesssim_p  \sum_{p_1+p_2\leq p\atop k_1+k_2\leq k}[\del\Psi]_{p_1,k_1}|\del H|_{p_2,k_2}
+ \Hcom_p[\Psi] + \Hwave_{p,k}[\del,\Psi]
\\
& \quad + \sum_{p_1+p_2\leq p\atop k_1+k_2\leq k}[\del\del\Psi]_{p_1-1,k_1-1}|H|_{p_2+1,k_2+1}
+\sum_{p_1+p_2\leq p\atop k_1+k_2\leq k}[\del\del\Psi]_{p_1-1,k_1}|\del H|_{p_2,k_2}
\\
& \quad + \Hcom_{p-1}[\del_{\delta}\Psi] + \Hwave_{p-1}[\del_{\delta}\Psi]
\\
& = \sum_{p_1'+p_2\leq p+1\atop k_1+k_2\leq k}[\del\Psi]_{p_1'-1,k_1-1}|H|_{p_2+1,k_2+1}
+\sum_{p_1'+p_2\leq p+1\atop k_1+k_2\leq k}[\del\Psi]_{p_1'-1,k_1}|\del H|_{p_2,k_2}
\\
& \quad + \Hcom_{p}[\Psi] + \Hwave_p[\Psi]
+ \Hcom_{p-1}[\del_{\delta}\Psi] + \Hwave_{p-1}[\del_{\delta}\Psi].
\endaligned
\ee
We also observe that, for $Z\in\mathscr{Z}$,
\begin{equation}\label{eq5-19-july-2025}
\big|\Hcom_{p}[Z\Psi]\big|\lesssim\Hcom_{p+1}[\Psi],
\end{equation}
\begin{equation}\label{eq3-19-july-2025}
\big|\Hwave_{p}[Z\Psi]\big|\lesssim \Hwave_{p+1}[\Psi].
\end{equation}
This concludes the induction argument.
\ese

\vskip.15cm

\bse
\noindent$\bullet$ When $Z = L_d,\Omega_{ab}$, then $I'$ is of type $(p+1,k+1)$ and we write 
\begin{equation}\label{eq13-18-july-2025}
\aligned
\zetab\big|[\mathscr{Z}^IZ,\opDirac]\Psi\big|_{\vec{n}} 
& \lesssim \zetab\big|\mathscr{Z}^I\big([Z,\opDirac]\Psi\big)\big|_{\vec{n}} 
+ \zetab\big|[\mathscr{Z}^I,\opDirac](Z\Psi)\big|_{\vec{n}}
\\
& \lesssim_p \zetab\big[[Z,\opDirac]\Psi\big]_{p,k} 
+ \zetab\big|[\mathscr{Z}^I,\opDirac](Z\Psi)\big|_{\vec{n}}
\\
& \lesssim_p  \sum_{p_1+p_2\leq p\atop k_1+k_2\leq k}[\del\Psi]_{p_1,k_1}|H|_{p_2+1,k_2+1}
+ \Hcom_{p}[\Psi] + \Hwave_{p}[\Psi]
\\
& \quad + \sum_{p_1+p_2\leq p\atop k_1+k_2\leq k}[\del Z\Psi]_{p_1-1,k_1-1}|H|_{p_2+1,k_2+1}
+\sum_{p_1+p_2\leq p\atop k_1+k_2\leq k}[\del Z\Psi]_{p_1-1,k_1}|\del H|_{p_2,k_2}
\\
& \quad + \Hcom_{p-1}[Z\Psi] + \Hwave_{p-1}[Z\Psi].
\endaligned
\end{equation}
In order to bound the fourth term in the right-hand side of the above expression, we apply Lemma~\ref{lem2-18-july-2025} and obtain
$$
[\del Z\Psi]_{p,k}\leq [Z\del \Psi]_{p,k} + C[\del\Psi]_{p,k} + C\zetab^{-1}\Hcom_p[\Psi], 
$$
where $C$ is constant depending upon $p$. Thus we find
\begin{equation}\label{eq3-09-oct-2025(l)}
[\del Z\Psi]_{p,k}\lesssim_p [\del\Psi]_{p+1,k+1} + \zetab^{-1}\Hcom_p[\Psi].
\end{equation}
Substituting this result into \eqref{eq13-18-july-2025}, we obtain
\begin{equation}\label{eq2-19-july-2025}
\aligned
\zetab\big|[\mathscr{Z}^IZ,\opDirac]\Psi\big|_{\vec{n}}
& \lesssim_p  \sum_{p_1'+p_2\leq p+1\atop k_1'+k_2\leq k+1}
[\del\Psi]_{p_1'-1,k_1'-1}|H|_{p_2+1,k_2+1} 
+ \Hcom_{p}[\Psi] + \Hwave_{p}[\Psi]
\\
& \quad + \sum_{p_1'+p_2\leq p+1\atop k_1'+k_2\leq k+1}[\del \Psi]_{p_1'-1,k_1'-1}|H|_{p_2+1,k_2+1}
+\zetab^{-1}\sum_{p_1+p_2\leq p}\Hcom_{p_1}[\Psi]|H|_{p_2+1}
\\
& \quad +\sum_{p_1'+p_2\leq p+1\atop k_1'+k_2\leq k+1}[\del \Psi]_{p_1'-1,k_1'}|\del H|_{p_2,k_2}
+ \zetab^{-1}\sum_{p_1+p_2\leq p}\Hcom_p[\Psi]|\del H|_{p_2}
\\
& \quad + \Hcom_{p-1}[Z\Psi] + \Hwave_{p-1}[Z\Psi].
\endaligned
\end{equation}
Next, thanks to \eqref{eq-com-condition} we have 
\begin{equation}\label{eq1-19-july-2025}
\zetab^{-1}\sum_{p_1+p_2\leq p}\Hcom_{p_1}[\Psi]\big(|H|_{p_2+1}+|\del H|_{p_2}\big)
\lesssim_p \sum_{p_1+p_2+p_3\leq p}[\Psi]_{p_1}|H|_{p_2+1}|\del H|_{p_3}.
\end{equation}
Substituting the above estimate together with \eqref{eq5-19-july-2025} and \eqref{eq3-19-july-2025}, we conclude the induction.
\ese
\end{proof}

%----------------------------------------------------------------------------------------------

For an application later on, we also state the following property.

\begin{lemma}[Top-order commutator estimates for the Dirac operator]
\label{lem1-09-oct-2025}
Assume that the metric satisfies \eqref{eq-com-condition}
and that the condition \eqref{eq-USA-condition} holds for a sufficiently small $\eps_s$. Then for all $I$ of type $(p,p)$, i.e., $\mathscr{Z}$ is purely composed by $L_a,\Omega_{ab}$ and does not contain $\del_{\alpha}$, one has 
\begin{equation}\label{eq2-09-oct-2025}
\aligned
\zetab|[\mathscr{Z}^I,\opDirac]\Psi|_{\vec{n}}
& \lesssim_p  \sum_{p_1+p_2\leq p\atop k_1+k_2\leq k}[\del\Psi]_{p_1-1,k_1-1}|H|_{p_2+1,k_2+1}  + \Hcom_{p-1}[\Psi] 
+ \Hwave_{p-1}[\Psi].
\endaligned
\end{equation}
\end{lemma}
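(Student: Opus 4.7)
The statement is strictly a refinement of Proposition~\ref{prop1-19-july-2025}: the bound \eqref{eq2-09-oct-2025} drops the second sum $\sum[\del\Psi]_{p_1-1,k_1}|\del H|_{p_2,k_2}$, but only under the restriction that $\mathscr{Z}^I$ is of type $(p,p)$, i.e.\ every constituent vector field is a boost or a rotation, never a translation. The plan is to revisit the induction used in the proof of Proposition~\ref{prop1-19-july-2025}, verifying that the $|\del H|$-term is generated \emph{exclusively} by the translation case of \eqref{eq12-18-july-2025}, and that this case is never invoked when one restricts the induction to operators of type $(p,p)$.

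\textbf{Base and inductive steps.} For $|I|=1$ we have $\mathscr{Z}^I = Z$ with $Z\in\{L_d,\Omega_{ab}\}$, and the estimate follows directly from \eqref{eq11-18-july-2025}, which contains only $|H|_{p_2+1,k_2+1}$-contributions and the cubic pools $\Hcom_{p-1}[\Psi]$, $\Hwave_{p-1}[\Psi]$. For the inductive step, let $\mathscr{Z}^{I'}=\mathscr{Z}^I Z$ be of type $(p+1,p+1)$; since $\ord(I')=\rank(I')$, the outermost $Z$ must again be a boost or rotation. I would decompose
\begin{equation*}
[\mathscr{Z}^I Z,\opDirac]\Psi \;=\; \mathscr{Z}^I\bigl([Z,\opDirac]\Psi\bigr) + [\mathscr{Z}^I,\opDirac](Z\Psi),
\end{equation*}
and control each piece. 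The first piece is handled by applying the high-order product rule \eqref{eq6-18-july-2025}, \eqref{eq7-18-july-2025}, \eqref{eq8-18-july-2025} and the gauge estimate \eqref{eqs1-07-oct-2025} to the expansion \eqref{eq5-21-aout-2025} of $[Z,\opDirac]\Psi$, yielding only $|H|_{p+1,k+1}$-type contributions together with $\Hcom_p[\Psi]$ and $\Hwave_p[\Psi]$. The second piece is handled by the inductive hypothesis \emph{for type $(p,p)$ operators}, which by construction does not contain a $|\del H|$-term; one then converts $[\del(Z\Psi)]_{p_1-1,k_1-1}$ to $[\del\Psi]_{p_1,k_1}$ via the commutator identity \eqref{eq3-09-oct-2025(l)}, absorbing the resulting cubic remainder into $\zetab^{-1}\Hcom_{p-1}[\Psi]$ by means of \eqref{eq1-19-july-2025}. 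Finally, the growth of the cubic pools under the outermost $Z$ is handled by \eqref{eq5-19-july-2025} and \eqref{eq3-19-july-2025}.

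\textbf{Main obstacle.} The only delicate point is verifying that at no step of the induction one is forced to invoke the full Proposition~\ref{prop1-19-july-2025} on a sub-piece that is not itself of type $(p,p)$, because any such detour would reintroduce the $|\del H|$-term through \eqref{eq12-18-july-2025}. The key observation, which closes the argument, is that a boost or rotation $Z$ raises both $\ord$ and $\rank$ by one simultaneously; hence when $\mathscr{Z}^{I'}$ is of type $(p+1,p+1)$ and we split off the outermost $Z$, the remainder $\mathscr{Z}^I$ is automatically of type $(p,p)$, and so is $\mathscr{Z}^I$ acting on $Z\Psi$ in the second piece. This structural consistency is precisely what allows a clean induction; all other ingredients (product rules, Leibniz expansion of the Clifford action, passage between $\widehat{\del_\alpha}\mathscr{Z}^J\Psi$ and $\mathscr{Z}^J\widehat{\del_\alpha}\Psi$ via Proposition~\ref{prop1-21-july-2025}) are already established in the preceding subsections and introduce at worst $\Hcom_{p-1}[\Psi]$-type remainders.
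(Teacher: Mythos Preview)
Your proposal is correct and follows essentially the same route as the paper: induction on $|I|$, the same decomposition $[\mathscr{Z}^I Z,\opDirac]\Psi = \mathscr{Z}^I([Z,\opDirac]\Psi) + [\mathscr{Z}^I,\opDirac](Z\Psi)$, the boost/rotation estimate \eqref{eq11-18-july-2025} for the first piece, and the inductive hypothesis together with \eqref{eq3-09-oct-2025(l)}, \eqref{eq5-19-july-2025}, \eqref{eq3-19-july-2025} for the second. Your ``main obstacle'' paragraph correctly identifies the structural reason the refinement works, namely that peeling off a boost or rotation from a type-$(p+1,p+1)$ operator leaves a type-$(p,p)$ operator, so the translation branch \eqref{eq12-18-july-2025} is never invoked.
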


\begin{proof}
This is proven by induction. When $p=1$, i.e., $\mathscr{Z}^I = L_a$ or $\Omega_{ab}$. Then \eqref{eq2-09-oct-2025} can be checked in view of \eqref{eq5-21-aout-2025} and \eqref{eq3-21-aout-2025}. Consider an admissible operator $\mathscr{Z}^{I'} = \mathscr{Z}^IZ$ with $Z = L_a$ or $\Omega_{ab}$, and assume that \eqref{eq2-09-oct-2025} is valid for $\mathscr{Z}^I$. Then thanks to \eqref{eq11-18-july-2025}, we have
\be
\aligned
\zetab[\mathscr{Z}^{I'},\opDirac]\Psi & = 
\zetab\big|[\mathscr{Z}^IZ,\opDirac]\Psi\big|_{\vec{n}} 
\lesssim \zetab\big|\mathscr{Z}^I\big([Z,\opDirac]\Psi\big)\big|_{\vec{n}} 
+ \zetab\big|[\mathscr{Z}^I,\opDirac](Z\Psi)\big|_{\vec{n}}
\\
& \lesssim_p  \zetab\big[[Z,\opDirac]\Psi\big]_{p,k} 
+ \zetab\big|[\mathscr{Z}^I,\opDirac](Z\Psi)\big|_{\vec{n}}
\\
& \lesssim_p  \sum_{p_1+p_2\leq p\atop k_1+k_2\leq k}[\del\Psi]_{p_1,k_1}|H|_{p_2+1,k_2+1}
+ \Hcom_{p}[\Psi] + \Hwave_{p}[\Psi] 
+ \zetab\big|[\mathscr{Z}^I,\opDirac](Z\Psi)\big|_{\vec{n}}.
\endaligned
\ee
For the last term we just apply the assumption of induction together with \eqref{eq5-19-july-2025}, \eqref{eq3-19-july-2025} and \eqref{eq3-09-oct-2025(l)}.
\end{proof}

}

%==================================================================================

\section{Decomposition of Dirac commutator in the near-light-cone region}
\label{section=N9}

\subsection{Estimates from generalized wave coordinate conditions}

We recall \eqref{eq6-05-oct-2025} for the definition of $\Mcal^{\near}_{\ell,[s_0,s_1]}$. Since in this region, $\del H$ will not provide sufficient decay, we need a more delicate analysis on the Dirac commutator. For simplicity in the notation, we introduce
\be
\Mcal^{\near}_{[s_0,s_1]} = \Mcal^{\near}_{\frac{1}{4},[s_0,s_1]} = \Mcal^{\ME}_{[s_0,s_1]}\cap\{r\geq 4t/3\}.
\ee

We introduce the frame $\{\delS_{\alpha}\}$ in the domain $\Mcal^{\near}_{[s_0,s_1]}$:
\begin{equation}\label{eq8-05-oct-2025}
\delS_0 = \del_t,\quad \delS_1 = \vec{V} = \del_t+\del_r,\quad \delS_{2} = \vec{S}_2,\quad \delS_3= \vec{S}_3
\end{equation}
where $\vec{S}_i$ are smooth orthonormal vectors of the sphere $\mathbb{S}^2$ with respect to the Euclidean metric. 

It is clear that
\begin{equation}\label{eq7-05-oct-2025}
\vec{V} = \frac{t-r}{t}\del_t + t^{-1}\frac{x^a}{r}L_a.
\end{equation}
For convenience of discussion, we define 
\be
\Omega_a = \Omega_{23},\quad \Omega_2 = \Omega_{31},\quad \Omega_3 = \Omega_{12}.
\ee
We also observe that $\Omega_{ab} = (x^a/t)L_b - (x^b/t)L_a$, so
\begin{equation}\label{eq9-05-oct-2025}
\vec{S}_i = t^{-1}S_i^aL_a, 
\end{equation}
where $S_i^a$ are smooth functions defined on $\mathbb{S}^2$ (and, therefore, are uniformly bounded). 

We denote by $\Phi^{\Scal}$ and $\PsiS$ the transition matrices in the following sens:
\begin{equation}
\delS_{\alpha} = \PhiS_{\alpha}^{\beta}\del_{\beta},\quad \del_{\alpha} = \PsiS_{\alpha}^{\beta}\delS_{\beta}.
\end{equation}
Form \eqref{eq8-05-oct-2025} (for $\vec{V},\del_t$) and \eqref{eq9-05-oct-2025} (for $\vec{S}_i$), we remark that $\PhiS_{\alpha}^{\beta}$ and $\PsiS_{\alpha}^{\beta}$ are homogeneous functions of degree zero defined in the conical region $\{3t/4\leq  r\leq 4t/3\}$. 

For any two-tensor $T_{\alpha\beta}\diff x^{\alpha}\otimes\diff x^{\beta}$, we define 
\be
T^{\Scal}_{\alpha\beta} := T(\delS_{\alpha},\delS_{\beta}). 
\ee
Especially, the spacetime metric $g$ can be expressed in this frame. Their components are written as $\gScal_{\alpha\beta}$ and $\gScal^{\alpha\beta}$. We remark the following estimate,
\begin{lemma}\label{lem1-06-oct-2025}
Assume that \eqref{eq-USA-condition} holds with sufficiently small $\eps_s$. Then in $\Mcal^{\near}_{[s_0,s_1]}$,
\begin{equation}
|\HScal_{\alpha\beta}|_{p,k} + |\HScal^{\alpha\beta}|_{p,k}\lesssim |H|_{p,k},
\end{equation}
where $H_{\alpha\beta} := g_{\alpha\beta}- \eta_{\alpha\beta}$.
\end{lemma}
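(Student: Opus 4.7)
The plan is to expand the frame components of $H$ (and its inverse) in terms of the Cartesian components via the transition matrices $\PhiS,\PsiS$, and then exploit the homogeneity of degree zero of these transition coefficients to reduce matters to a routine product estimate in the norms $|\cdot|_{p,k}$.

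First I would write, by bilinearity of $H$ together with $\delS_\alpha = \PhiS_\alpha^\mu\del_\mu$,
\begin{equation*}
\HScal_{\alpha\beta} \;=\; H(\delS_\alpha,\delS_\beta) \;=\; \PhiS_\alpha^\mu\,\PhiS_\beta^\nu\,H_{\mu\nu},
\end{equation*}
and analogously (cf.\ \eqref{eq1-14-july-2025} for the null-frame case)
\begin{equation*}
\HScal^{\alpha\beta} \;=\; H^{\mu\nu}\,\PsiS_\mu^\alpha\,\PsiS_\nu^\beta,
\end{equation*}
where $H^{\mu\nu}:=g^{\mu\nu}-\eta^{\mu\nu}$. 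Applying any admissible operator $\mathscr{Z}^I$ of type $(p,k)$ and expanding via the Leibniz rule over admissible partitions (as in Section~\ref{section===84}) produces a finite sum of terms of the form $\mathscr{Z}^{I_1}(\PhiS_\alpha^\mu)\,\mathscr{Z}^{I_2}(\PhiS_\beta^\nu)\,\mathscr{Z}^{I_3}(H_{\mu\nu})$ with $I=I_1\odot I_2\odot I_3$, and analogously with $\PsiS$ and $H^{\mu\nu}$.

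The key input is the statement, made immediately before the lemma, that $\PhiS_\alpha^\beta$ and $\PsiS_\alpha^\beta$ are homogeneous functions of degree zero in the conical region $\{3t/4\le r\le 4t/3\}$, which contains $\Mcal^{\near}_{[s_0,s_1]}$. By the properties of homogeneous functions recalled in Section~\ref{section===83}, one has $|\PhiS_\alpha^\mu|_{p,k}+|\PsiS_\alpha^\mu|_{p,k}\lesssim_p 1$ on $\Mcal^{\near}_{[s_0,s_1]}$: every admissible derivative either preserves the degree ($L_a,\Omega_{ab}$) or lowers it by one while introducing a factor bounded by $(r+t)^{-1}\lesssim 1$ on that region. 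Combining this uniform bound with the product estimate \eqref{eq4-11-july-2025} gives immediately $|\HScal_{\alpha\beta}|_{p,k}\lesssim_p |H|_{p,k}$. For the contravariant components we argue identically, now invoking Lemma~\ref{lem1-11-july-2025} to control $|H^{\mu\nu}|_{p,k}\lesssim |H|_{p,k}$ under the smallness condition \eqref{eq-USA-condition}, and we conclude $|\HScal^{\alpha\beta}|_{p,k}\lesssim_p |H|_{p,k}$.

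There is no substantive obstacle; the only point requiring care is the bookkeeping of degrees, namely verifying that admissible vector fields act as derivations on products of homogeneous coefficients while preserving a uniform bound in the near-light-cone region. This is already packaged into the framework of Section~\ref{section===83}, and the smallness assumption \eqref{eq-USA-condition} enters only through Lemma~\ref{lem1-11-july-2025} to pass between $H_{\mu\nu}$ and $H^{\mu\nu}$.
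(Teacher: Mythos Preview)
Your proposal is correct and follows the same strategy as the paper for the covariant components $\HScal_{\alpha\beta}$: expand via the transition matrix $\PhiS$ and exploit its degree-zero homogeneity in the near-light-cone region. For the contravariant components $\HScal^{\alpha\beta}$ there is a minor organizational difference. You transform back to Cartesian components via $\HScal^{\alpha\beta}=H^{\mu\nu}\PsiS_\mu^\alpha\PsiS_\nu^\beta$ and invoke the already-proven Lemma~\ref{lem1-11-july-2025} to control $|H^{\mu\nu}|_{p,k}$; the paper instead redoes the Neumann-series argument of that lemma directly in the $\Scal$-frame, which is why it explicitly records that the Minkowski metric $\eta_{\Scal}$ (given by \eqref{eq1-06-oct-2025}) is invertible with bounded inverse under small perturbations. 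Your route is slightly more economical since it reuses a proven result; the paper's route has the advantage of making the structural point about $\eta_{\Scal}$ explicit. Note that either way one implicitly needs the high-order smallness \eqref{eq-vect-condition} (not just the zero-order \eqref{eq-USA-condition}) to control the lower-order factors in the Neumann series, though this hypothesis is available everywhere the lemma is applied.
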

\begin{proof}
This is because of the homogeneity of $\PhiS_{\alpha}^{\beta}$ and $\PsiS_{\alpha}^{\beta}$. The estimate on $\HScal^{\alpha\beta}$ is similar to Lemma~\ref{lem1-11-july-2025}. We only need to remark that the Minkowski metric
\begin{equation}\label{eq1-06-oct-2025}
\big({\eta_{\Scal}}_{\alpha\beta}\big)
=\left(\begin{array}{cccc}
-1 & \quad -1 &0 &0
\\
-1 &0 &0 &0
\\
0 &0 &1 &0
\\
0 &0 &0 &1
\end{array}
\right)
\end{equation}
is invertible, and its inverse remain bounded under small uniform perturbations (in the pointwise sense).
\end{proof}

%-------------------------------------

Moreover, in view of \eqref{eq1-06-oct-2025} we have 
\begin{equation}
\aligned
& \gScal_{0i} = \HScal_{0i},\quad \gScal_{1i} = \HScal_{1i},\quad i=2,3,
\\
& \gScal_{ij} = \HScal_{ij},\quad i=2,j=3,
\\
& \gScal_{11} = \HScal_{11}.
\endaligned
\end{equation}
For the simplicity of expression, we write
$
\delSs 
$
for $\delS_a, a>0$. Then we have the following estimate.

\begin{lemma}\label{lem2-06-oct-2025}
In $\Mcal^{\near}_{[s_0,s_1]}$, for all sufficiently regular functions $u$, one has
\begin{equation}\label{eq3-06-oct-2025}
|\delSs u|\lesssim \frac{\la t-r\ra}{t}|\del u| + t^{-1}|Lu| 
\end{equation}
and, more generally,
\begin{equation}
|\delSs u|_{p,k}\lesssim \frac{\la r-t\ra}{r}|\del u|_{p,k} + |\delsN u|_{p,k}.
\end{equation}
\end{lemma}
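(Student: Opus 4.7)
The pointwise estimate \eqref{eq3-06-oct-2025} is immediate from the structural identities at hand. By \eqref{eq7-05-oct-2025}, $\vec V = \tfrac{t-r}{t}\del_t + t^{-1}(x^a/r)L_a$, so using $|x^a/r|\leq 1$ yields $|\vec V u|\leq \tfrac{|t-r|}{t}|\del u| + t^{-1}|Lu|$; by \eqref{eq9-05-oct-2025}, $\vec S_i = t^{-1}S_i^a L_a$ with uniformly bounded $S_i^a$, hence $|\vec S_i u|\lesssim t^{-1}|Lu|$. Combining and using $|t-r|\leq \la t-r\ra$ gives the claim.

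For the high-order estimate, my plan is to first rewrite $\delSs$ in terms of the null tangential derivatives $\delsN_a = (x^a/r)\del_t + \del_a$. A direct computation gives
\[
\vec V = (x^a/r)\delsN_a, \qquad \vec S_i = S_i^a\delsN_a,
\]
the first identity because $\sum_a(x^a/r)^2 = 1$, the second because the tangentiality of $\vec S_i$ to $\mathbb{S}^2$ gives $S_i^a x^a=0$, so that $\vec S_i = S_i^a\del_a = S_i^a\delsN_a$. Thus $\delSs = f^a\delsN_a$ with $f^a\in\{x^a/r,S_i^a\}$ smooth, bounded, and homogeneous of degree zero. For an admissible operator $\mathscr{Z}^I$ of type $(p,k)$, Leibniz gives
\[
\mathscr{Z}^I(\delSs u) = \sum_{I_1\odot I_2=I}(\mathscr{Z}^{I_1}f^a)\,\mathscr{Z}^{I_2}(\delsN_a u),
\]
with $|\mathscr{Z}^{I_1}f^a|\lesssim r^{-(p_1-k_1)}\lesssim 1$ since $r\sim t\gtrsim 1$ in $\Mcal^{\near}_{[s_0,s_1]}$. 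Writing $\mathscr{Z}^{I_2}(\delsN_a u) = \delsN_a(\mathscr{Z}^{I_2}u) + [\mathscr{Z}^{I_2},\delsN_a]u$, the first piece is directly controlled by $|\delsN u|_{p,k}$, and the whole task reduces to controlling the commutator.

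The core step is the commutator bookkeeping, and this is where I expect the main obstacle to lie. Direct computation using $[L_b,\del_t]=-\del_b$, $[L_b,\del_a]=-\delta_{ab}\del_t$, $[\Omega_{bc},\del_t]=0$, $[\Omega_{bc},\del_a]=-\delta_{ab}\del_c+\delta_{ac}\del_b$, together with the identity $\del_e = \delsN_e - (x^e/r)\del_t$, yields the clean base identities
\[
[L_b,\delsN_a] = \tfrac{t-r}{r}\bigl(\delta_{ab}-x^ax^b/r^2\bigr)\del_t - (x^a/r)\delsN_b,\quad
[\Omega_{bc},\delsN_a] = -\delta_{ab}\delsN_c + \delta_{ac}\delsN_b,
\]
\[
[\del_\alpha,\delsN_a] = \del_\alpha(x^a/r)\,\del_t = O(r^{-1})\,\del_t.
\]
Every single commutator is thus either a bounded multiple of some $\delsN$-derivative (to be absorbed into $|\delsN u|_{p,k}$) or a $\tfrac{\la r-t\ra}{r}$-weighted multiple of some $\del$-derivative (to be absorbed into $\tfrac{\la r-t\ra}{r}|\del u|_{p,k}$, using $r^{-1}\leq\tfrac{\la r-t\ra}{r}$). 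An induction on $|I_2|$ then expresses $[\mathscr{Z}^{I_2},\delsN_a]u$ as a finite sum of operators of strictly lower order of precisely these two shapes, closing the estimate. The principal technical difficulty is verifying that the weight $\tfrac{\la r-t\ra}{r}$ is genuinely preserved under the iteration: $L_b$ and $\Omega_{bc}$ act on degree-zero homogeneous functions (such as $x^a/r$, $\delta_{ab}-x^ax^b/r^2$) without worsening their size, while each fresh $\del$-derivative consumes at most one factor $r^{-1}$, which is harmless in view of $r^{-1}\leq\tfrac{\la r-t\ra}{r}$. Combining the Leibniz expansion with this commutator induction delivers the announced bound.
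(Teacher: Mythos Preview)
Your identities $\vec V = (x^a/r)\delsN_a$ and $\vec S_i = S_i^a\delsN_a$ (using $S_i^a x^a=0$) are correct, and the pointwise estimate follows exactly as you say. For the high-order estimate, however, you overcomplicate matters and introduce a confusion. After your Leibniz step, the factor $\mathscr{Z}^{I_2}(\delsN_a u)$ is \emph{already} bounded by $|\delsN u|_{p_2,k_2}\le |\delsN u|_{p,k}$, directly from the convention $|v|_{p,k}=\max_{I}|\mathscr{Z}^I v|$ applied with $v=\delsN_a u$: the admissible operator sits on the \emph{outside}. Leibniz plus the homogeneity bound $|\mathscr{Z}^{I_1}f^a|\lesssim 1$ therefore gives $|\delSs u|_{p,k}\lesssim|\delsN u|_{p,k}$ immediately---the term $\tfrac{\la r-t\ra}{r}|\del u|_{p,k}$ is not even needed. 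Your splitting $\mathscr{Z}^{I_2}(\delsN_a u) = \delsN_a(\mathscr{Z}^{I_2}u) + [\mathscr{Z}^{I_2},\delsN_a]u$ and the claim that ``the first piece is directly controlled by $|\delsN u|_{p,k}$'' have the ordering backwards: it is the left-hand side that the definition controls directly, not $\delsN_a(\mathscr{Z}^{I_2}u)$. The commutator formulas you compute are correct, but the whole detour is unnecessary.

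The paper takes a slightly different route for $\vec V$: it works from \eqref{eq7-05-oct-2025}, namely $\vec V = \tfrac{t-r}{t}\del_t + t^{-1}(x^a/r)L_a$, and invokes $\big|(t-r)/t\big|_p\lesssim\tfrac{\la r-t\ra}{t}$ (cited from \cite[Lemma~6.7]{PLF-YM-PDE}); this is the origin of the extra term in the statement. Your direct identification $\vec V=(x^a/r)\delsN_a$ bypasses this and actually yields a sharper bound.
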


\begin{proof}
For $\vec{S}_i$, we only need to remark that $r^{-1}S_i^a$ are homogeneous of degree $(-1)$ in $\{3t/4\leq r\leq 4t/3\}$, and $\Omega_{ab} = \frac{x^a}{t}L_b - \frac{x^b}{t}L_a$. For $\vec{V}$, we recall the following estimate (cf. \cite[Lemma~6.7]{PLF-YM-PDE}, and remark that any admissible operator can be written as a finite linear combination of  $\del^IL^J$ with homogeneous coefficients of degree zero in $\{3t/4\leq r\leq 4t/3\}$):
\begin{equation}
\Big|\frac{t-r}{t}\Big|_p\lesssim_p\frac{\la r-t\ra}{t},\quad \text{ in }  \Mcal^{\near}_{[s_0,s_1]}.
\end{equation}
\end{proof}

We write the generalized wave coordinate conditions $\Gamma^{\lambda} = W^{\lambda}$ into the following form:
\begin{equation}
g_{\beta\delta} \del_{\alpha}H^{\alpha\beta}
= - \frac{1}{2}g^{\alpha\beta}\del_{\delta}H_{\alpha\beta} - W_{\delta}.
\end{equation}
We write this decomposition in the frame $\{\delS_{\alpha}\}$:
\begin{equation}
\aligned
\gScal_{\nu\lambda} \delS_{\mu}\HScal^{\mu\nu} 
& =- \frac{1}{2}\gScal^{\mu\nu}\delS_{\lambda}\HScal_{\mu\nu}
- {W_{\Scal}}_{\lambda}
\\
& \quad - \PhiS_{\lambda}^{\delta}g_{\beta\delta}\del_{\alpha}\big(\PhiS_{\mu}^{\alpha}\PhiS_{\nu}^{\beta}\big)\HScal^{\mu\nu} 
- \frac{1}{2}g^{\alpha\beta}\delS_{\lambda}\big(\PsiS_{\alpha}^{\mu}\PsiS_{\beta}^{\nu}\big)\HScal_{\mu\nu}.
\endaligned
\end{equation}
This can be written as
\begin{equation}\label{eq2-06-oct-2025}
\aligned
{\eta_{\Scal}}_{\nu\lambda}\del_t\HScal^{0\nu} & =- \gScal_{\nu\lambda}\delSs_a\HScal^{a\nu} 
- \HScal_{\nu\lambda}\del_t\HScal^{0\nu}
- \frac{1}{2}\gScal^{\mu\nu}\delS_{\lambda}\HScal_{\mu\nu}
- {W_{\Scal}}_{\lambda}
\\
& \quad - \PhiS_{\lambda}^{\delta}g_{\beta\delta}\del_{\alpha}\big(\PhiS_{\mu}^{\alpha}\PhiS_{\nu}^{\beta}\big)\HScal^{\mu\nu} 
- \frac{1}{2}g^{\alpha\beta}\delS_{\lambda}\big(\PsiS_{\alpha}^{\mu}\PsiS_{\beta}^{\nu}\big)\HScal_{\mu\nu}.
\endaligned
\end{equation}
We thus ready to establish the following estimates.

\begin{proposition}\label{prop1-06-oct-2026}
Assume that \eqref{eq-vect-condition} and \eqref{eq-USA-condition} hold for some sufficiently small $\eps_s$. Then one has 
\begin{equation}
\aligned
|\del_t\HScal^{00}|_{p,k} + |\del_t \HScal^{02}|_{p,k} + |\del_t \HScal^{03}|_{p,k}& \lesssim \frac{\la r-t\ra}{r}|\del H|_{p,k} + |\delsN H|_{p,k} 
\\
& \quad + \sum_{p_1+p_2\leq p}|H|_{p_1}|\del H|_{p_2}. 
\endaligned
\end{equation}
\end{proposition}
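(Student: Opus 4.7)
The strategy is to solve the rewritten wave-gauge identity \eqref{eq2-06-oct-2025} for the three specific components $\del_t\HScal^{00}$, $\del_t\HScal^{02}$, $\del_t\HScal^{03}$ by choosing $\lambda\in\{1,2,3\}$ and exploiting the block structure of the Minkowski matrix \eqref{eq1-06-oct-2025}. Reading off that matrix, the expression ${\eta_{\Scal}}_{\nu\lambda}\del_t\HScal^{0\nu}$ reduces to $-\del_t\HScal^{00}$ for $\lambda=1$, to $\del_t\HScal^{02}$ for $\lambda=2$, and to $\del_t\HScal^{03}$ for $\lambda=3$. Crucially, for these three values of $\lambda$ the derivative $\delS_\lambda$ that appears on the right-hand side is one of the tangential vectors $\vec V$, $\vec S_2$, $\vec S_3$, i.e., $\delS_\lambda=\delSs$, which is exactly what makes the analysis of Lemma~\ref{lem2-06-oct-2025} applicable.

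With such a $\lambda$ fixed, I would estimate the right-hand side of \eqref{eq2-06-oct-2025} term by term in the norm $|\cdot|_{p,k}$. The first two terms $\gScal_{\nu\lambda}\delSs_a\HScal^{a\nu}$ and $\tfrac12\gScal^{\mu\nu}\delS_\lambda\HScal_{\mu\nu}$ contain only tangential derivatives of $\HScal$; applying the high-order version of Lemma~\ref{lem2-06-oct-2025} together with Lemma~\ref{lem1-06-oct-2025} to bound $|\gScal_{\nu\lambda}|_{p,k}$ and $|\gScal^{\mu\nu}|_{p,k}$ (using \eqref{eq-vect-condition}) produces the dominant contribution $\frac{\la r-t\ra}{r}|\del H|_{p,k}+|\delsN H|_{p,k}$. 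The quadratic piece $\HScal_{\nu\lambda}\,\del_t\HScal^{0\nu}$ is absorbed via the product rule \eqref{eq1-17-july-2025} of Proposition~\ref{prop1-15-july-2025}, which yields the bookkeeping term $\sum_{p_1+p_2\le p}|H|_{p_1}|\del H|_{p_2}$. The two correction terms involving $\del_\alpha\PhiS$ and $\delS_\lambda\PsiS$ are benign: since $\PhiS$ and $\PsiS$ are homogeneous of degree zero on $\{3t/4\le r\le 4t/3\}\supset\Mcal^{\near}_{[s_0,s_1]}$, their derivatives carry a factor of $r^{-1}$, and the resulting product with $\HScal^{\mu\nu}$ is dominated by $|\delsN H|_{p,k}$ up to a quadratic remainder.

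To promote from $p=k=0$ to arbitrary type $(p,k)$, I would apply an admissible operator $\mathscr{Z}^I$ to the identity obtained above and commute it through $\del_t$ on the left; the commutator $[\mathscr{Z}^I,\del_t]$ produces admissible operators of strictly lower type acting on $\HScal^{0\nu}$ with degree-zero homogeneous coefficients, so an induction on $\ord(I)$ closes. On the right-hand side, the same mechanism works because on $\{3t/4\le r\le 4t/3\}$ every admissible operator decomposes as a finite linear combination of $\del^IL^J$ with degree-zero homogeneous coefficients, so the tangential structure encoded in $\delSs$ and $\delsN$ is preserved up to lower-order admissible pieces. The main obstacle I anticipate is maintaining the favorable weight $\la r-t\ra/r$ throughout these commutations: one must verify that any admissible derivative hitting $\la r-t\ra/r$ either remains bounded by the same weight or generates an additional $r^{-1}$ that is absorbed by the $|\delsN H|_{p,k}$ or quadratic terms, using the standard identity $|\la r-t\ra/r|_p\lesssim_p \la r-t\ra/r$ on the near-light-cone region.
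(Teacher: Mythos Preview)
Your proposal is correct and follows essentially the same route as the paper: fix $\lambda\in\{1,2,3\}$ in \eqref{eq2-06-oct-2025}, use the block form of ${\eta_{\Scal}}_{\nu\lambda}$ from \eqref{eq1-06-oct-2025} to isolate $\del_t\HScal^{00}$, $\del_t\HScal^{02}$, $\del_t\HScal^{03}$ on the left, and control each right-hand term with Lemma~\ref{lem1-06-oct-2025}, Lemma~\ref{lem2-06-oct-2025}, the homogeneity of $\PhiS,\PsiS$, and the product rule.

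One simplification you should make: the paragraph on ``commuting $\mathscr{Z}^I$ through $\del_t$'' and closing an induction on $\ord(I)$ is unnecessary. The identity \eqref{eq2-06-oct-2025} holds pointwise, so applying $\mathscr{Z}^I$ to both sides immediately gives $\mathscr{Z}^I(\del_t\HScal^{00})$ equal to $\mathscr{Z}^I$ of the right-hand side, and that is exactly the quantity appearing in $|\del_t\HScal^{00}|_{p,k}$. The paper therefore takes the $|\cdot|_{p,k}$ norm directly on both sides, with the Leibniz-type expansions (Lemma~\ref{lem1-06-oct-2025}, \eqref{eq4-11-july-2025}, \eqref{eq-vect-condition}) absorbing the coefficient factors. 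No commutator bookkeeping or induction is required on the left, and your worry about preserving the weight $\la r-t\ra/r$ under commutation does not arise.
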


\begin{proof}
We fix $\lambda = 1,2,3$ in \eqref{eq2-06-oct-2025}. When $\lambda=1$, thanks to \eqref{eq1-06-oct-2025}, in the left-hand side we only have ${\eta_{\Scal}}_{\nu=0,\lambda=1} = -1 \neq 0$. Thus
\begin{equation}
\aligned
- \del_t\HScal^{00} & =- \gScal_{\nu1}\delSs_a\HScal^{a\nu} 
- \HScal_{\nu1}\del_t\HScal^{0\nu}
- \frac{1}{2}\gScal^{\mu\nu}\vec{V}\HScal_{\mu\nu}
- {W_{\Scal}}_{1}
\\
& \quad - \PhiS_{1}^{\delta}g_{\beta\delta}\del_{\alpha}\big(\PhiS_{\mu}^{\alpha}\PhiS_{\nu}^{\beta}\big)\HScal^{\mu\nu} 
- \frac{1}{2}g^{\alpha\beta}\delS_{1}\big(\PsiS_{\alpha}^{\mu}\PsiS_{\beta}^{\nu}\big)\HScal_{\mu\nu}.
\endaligned
\end{equation}
Then we take the $|\cdot|_{p,k}$ norm on each term in the right-hand side by applying Lemma~\ref{lem1-06-oct-2025} and the homogeneity of $\PhiS,\PsiS$ together with \eqref{eq-vect-condition} for first, third, fifth and last term in order to simplify the high-order terms. Then we apply Lemma~\ref{lem2-06-oct-2025} on the first term on the right-hand side for the derivative $\delSs$. 

The case $\lambda =2,3$ leads us to similar estimates on $\HScal^{02}, \HScal^{03}$. We omit the details.
\end{proof}

%---------------------------------------------------------------------------

We then turn our attention to the deformation tensor $\pi[\del_t]$. 

\begin{proposition}\label{prop2-06-oct-2025}
Assume that \eqref{eq-vect-condition} and \eqref{eq-USA-condition} hold for some sufficiently small $\eps_s$. Then one has 
\begin{equation}
\aligned
\big|{\pi[\del_{\alpha}]_{\Scal}}^{00}\big|_{p,k} + \big|{\pi[\del_{\alpha}]_{\Scal}}^{02}\big|_{p,k} + \big|{\pi[\del_{\alpha}]_{\Scal}}^{03}\big|_{p,k}& \lesssim \frac{\la r-t\ra}{r}|\del H|_{p,k} + |\delsN H|_{p,k} 
\\
& \quad + \sum_{p_1+p_2\leq p}|H|_{p_1}|\del H|_{p_2}. 
\endaligned
\end{equation}
\end{proposition}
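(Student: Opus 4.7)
The plan is to reduce the claim to Proposition~\ref{prop1-06-oct-2026} by exploiting the following structural property of the frame $\{\delS_\alpha\}$: the transition matrices $\PhiS_\alpha^\beta$ and $\PsiS_\alpha^\beta$ are independent of $t$. Indeed, inspecting \eqref{eq8-05-oct-2025}, $\del_t$ has constant components, $\vec{V} = \del_t + (x^a/r)\del_a$ has components in $\{1,\,x^a/r\}$, and each $\vec{S}_i$ is a spatial vector tangent to the round sphere with coefficients depending only on the angular variables $x/r$. Hence $\del_t\PhiS_\alpha^\beta = 0$, and the identity $\PhiS\cdot\PsiS = I$ then forces $\del_t\PsiS_\alpha^\beta = 0$ as well.

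Using \eqref{eq4-06-oct-2025(l)}, I can write $\pi[\del_\alpha]^{\rho\sigma} = -\del_\alpha H^{\rho\sigma}$. Passing to the $\Scal$ frame via $\HScal^{\mu\nu} = \PsiS_\rho^\mu \PsiS_\sigma^\nu H^{\rho\sigma}$ and exploiting the above $t$-invariance, the case $\alpha = 0$ yields the exact identity ${\pi[\del_t]_{\Scal}}^{\mu\nu} = -\del_t\HScal^{\mu\nu}$ with no commutator error. For each of the good components $(\mu,\nu)\in\{(0,0),(0,2),(0,3)\}$, the desired estimate then follows immediately from Proposition~\ref{prop1-06-oct-2026}.

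For the spatial cases $\alpha = a$, I would decompose $\del_a = -(x^a/r)\del_t + \delsN_a$, giving
\[
{\pi[\del_a]_{\Scal}}^{\mu\nu} \;=\; (x^a/r)\,\del_t\HScal^{\mu\nu} \;-\;\PsiS_\rho^\mu\PsiS_\sigma^\nu\,\delsN_a H^{\rho\sigma}.
\]
The first term is again controlled by Proposition~\ref{prop1-06-oct-2026} for the good components. For the second, I would rewrite $\PsiS_\rho^\mu\PsiS_\sigma^\nu\,\delsN_a H^{\rho\sigma} = \delsN_a\HScal^{\mu\nu} - \delsN_a(\PsiS_\rho^\mu\PsiS_\sigma^\nu)\,H^{\rho\sigma}$; the first piece is directly absorbed in the $|\delsN H|_{p,k}$ slot of the right-hand side, while the second carries an angular coefficient of homogeneity degree $-1$ times $H$ that is handled by exactly the same mechanism used in the companion Proposition~\ref{prop1-06-oct-2026}.

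The main technical obstacle will be the systematic bookkeeping at high order. Under $\mathscr{Z}^I$, each contribution expands by Leibniz into products of derivatives of the $t$-independent coefficients $\PhiS$, $\PsiS$, $x^a/r$ (each homogeneous of degree $0$ or $-1$) with derivatives of $H$. The homogeneity estimates together with \eqref{eq-vect-condition} and \eqref{eq-USA-condition} allow every lower-order product either to be absorbed into $|\delsN H|_{p,k}$ (when a good derivative falls on $H$) or into the cubic remainder $\sum_{p_1+p_2\leq p}|H|_{p_1}|\del H|_{p_2}$ (when the extra factor already carries a $\del H$), exactly along the lines of the bookkeeping scheme employed in the proof of Proposition~\ref{prop1-06-oct-2026}.
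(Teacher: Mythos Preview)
Your proposal is correct and follows essentially the same route as the paper. The paper also writes ${\pi[\del_\gamma]_\Scal}^{\mu\nu} = -\del_\gamma\HScal^{\mu\nu} + H^{\rho\sigma}\del_\gamma(\PsiS_\rho^\mu\PsiS_\sigma^\nu)$ and reduces the $\del_t$ case to Proposition~\ref{prop1-06-oct-2026}; for $\del_a$ it uses the decomposition $\del_a = t^{-1}L_a - (x^a/t)\del_t$ rather than your $\del_a = -(x^a/r)\del_t + \delsN_a$, but since $t^{-1}L_a$ and $\delsN_a$ differ by a $\tfrac{r-t}{r}\del_t$ correction these are equivalent up to a term already present on the right-hand side. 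Your observation that $\del_t\PsiS_\rho^\mu = 0$ (because the transition matrices depend only on $x/r$) is sharper than what the paper states and makes the $\alpha=0$ case slightly cleaner, but the overall architecture is the same.
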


\begin{proof}
\bse
Recalling Lemma~\ref{lem3-06-oct-2025(l)} and \eqref{eq4-06-oct-2025(l)}, we obtain
\be
\pi[\del_{\alpha}]^{\alpha\beta} = g^{\alpha\mu}g^{\beta\nu}\del_{\alpha}H_{\mu\nu} = - \del_{\alpha}H^{\alpha\beta}.
\ee
We also have the identity
\begin{equation}
{\pi[\del_{\alpha}]_{\Scal}}^{\alpha\beta} = - \del_{\alpha}\HScal^{\alpha\beta} 
+ H^{\mu\nu}\del_{\alpha}\big(\PsiS_{\beta}^{\alpha}\PsiS_{\mu}^{\beta}\big).
\end{equation}
We thus obtain the desired result for $\pi[\del_t]$ by Proposition~\ref{prop1-06-oct-2026}. For the remaining translations, we note that
\be
\del_a = t^{-1}L_a - (x^a/t)\del_t
\ee
and
\be
{\pi[\del_a]_{\Scal}}^{\alpha\beta} 
= -(x^a/t)\del_t\HScal^{\alpha\beta} - t^{-1}L_a\HScal^{\alpha\beta} 
+ H^{\mu\nu}\del_a\big(\PsiS_{\beta}^{\alpha}\PsiS_{\mu}^{\beta}\big), 
\ee
which leads us to the estimates on $\pi[\del_a]$.
\ese
\end{proof}

%-----------------------------------------------------------------------------------------------------------------------------

\subsection{Estimates from spacetime integration}

We thus need to rely on $\mathrm{grad}(t-r)$. For convenience i the discussion, we recall a notation introduced in Proposition~\ref{prop1-05-oct-2025}:
\begin{equation}\label{eq13-06-oct-2025}
\vec{\gamma} := \mathrm{grad}(r-t).
\end{equation}
Under the assumptions \eqref{eq-USA-condition} and \eqref{eq2-09-oct-2025(l)}, $\vec{\gamma}$ is timelike and future oriented. Then we remark the following Cauchy-Schwartz inequality:
\begin{equation}
\la \Phi,\vec{\gamma}\cdot\Psi\ra_{\ourD}\leq |\Phi|_{\vec{\gamma}}|\Psi|_{\vec{\gamma}},
\end{equation} 
where $|\Phi|_{\vec{\gamma}}^2 = \la \Phi,\vec{\gamma}\cdot\Phi\ra_{\ourD}$. Based on the above preparations, we establish the following estimate.

\begin{lemma}\label{lem5-06-oct-2025}
Assume \eqref{eq-vect-condition} holds and \eqref{eq-USA-condition} holds for a sufficiently small $\eps_s$. Then one has 
\begin{equation}
\big|\la\Phi,\vec{V}\cdot\Psi \ra_{\ourD}\big|
\lesssim |\Phi|_{\vec{\gamma}}|\Psi|_{\vec{\gamma}} 
+ \zetab^{-1}|H||\Phi|_{\vec{n}}|\Psi|_{\vec{n}}.
\end{equation}
\end{lemma}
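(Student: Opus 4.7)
The plan is to compare $\vec{V}$ with the metric gradient $\vec{\gamma}=\mathrm{grad}(r-t)$ and exploit the fact that they agree exactly in flat spacetime. A direct computation, using $\del_0(r-t)=-1$ and $\del_a(r-t)=x^a/r$ together with $g^{\alpha\beta}=\eta^{\alpha\beta}+H^{\alpha\beta}$, yields
\begin{equation*}
\vec{\gamma}
= \bigl(1-H^{00}+ (x^a/r)H^{a0}\bigr)\del_t
+ \bigl((x^a/r)+(x^a/r)H^{ab}- H^{0b}\bigr)\del_b,
\end{equation*}
and therefore
\begin{equation*}
\vec{V}-\vec{\gamma}
=\bigl(H^{00}-(x^a/r)H^{a0}\bigr)\del_t
+\bigl(H^{0b}-(x^a/r)H^{ab}\bigr)\del_b.
\end{equation*}
All Cartesian components of $\vec{V}-\vec{\gamma}$ are therefore bounded by $|H|$, and Lemma~\ref{lem2-13-july-2025} provides $|\vec{V}-\vec{\gamma}|_{\vec{n}}\lesssim\zetab^{-1}|H|$. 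The bootstrap smallness in Lemma~\ref{lem1-11-july-2025}--\ref{lem1-16-june-2025} guarantees in addition that the $H^{\alpha\beta}$ appearing above are controlled by $|H|$.

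Next, I split
\begin{equation*}
\la\Phi,\vec{V}\cdot\Psi\ra_{\ourD}
=\la\Phi,\vec{\gamma}\cdot\Psi\ra_{\ourD}
+\la\Phi,(\vec{V}-\vec{\gamma})\cdot\Psi\ra_{\ourD},
\end{equation*}
and handle the two terms separately. For the remainder term, Lemma~\ref{lem1-06-april-2025} applied to the vector $X=\vec{V}-\vec{\gamma}$ combined with the pointwise bound just derived gives directly
\begin{equation*}
\bigl|\la\Phi,(\vec{V}-\vec{\gamma})\cdot\Psi\ra_{\ourD}\bigr|
\lesssim|\vec{V}-\vec{\gamma}|_{\vec{n}}\,|\Phi|_{\vec{n}}|\Psi|_{\vec{n}}
\lesssim\zetab^{-1}|H|\,|\Phi|_{\vec{n}}|\Psi|_{\vec{n}},
\end{equation*}
which is the second contribution in the claim.

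For the principal term, under \eqref{eq-USA-condition} the light-bending condition $H^{\Ncal00}<0$ holds in $\{r\ge 3t/4\}\cap\Mcal_{[s_0,s_1]}$ (and everywhere else the region is away from the light cone and the estimate is trivial since $\zetab\simeq 1$), so by the computation carried out in the proof of Lemma~\ref{lem1-18-aout-2025} the vector $\vec{\gamma}$ is future-oriented and timelike, with $-g(\vec{\gamma},\vec{\gamma})=-H^{\Ncal00}+O(|H|^2)>0$. Setting $\ell:=\sqrt{-g(\vec{\gamma},\vec{\gamma})}$ and $\tilde{\vec{n}}:=\vec{\gamma}/\ell$ produces a unit future-timelike vector, to which I apply Lemma~\ref{prop2-04-mai-2025}:
\begin{equation*}
\bigl|\la\Phi,\tilde{\vec{n}}\cdot\Psi\ra_{\ourD}\bigr|\lesssim|\Phi|_{\tilde{\vec{n}}}|\Psi|_{\tilde{\vec{n}}}.
\end{equation*}
Since $|\Psi|_{\tilde{\vec{n}}}^{2}=\ell^{-1}\la\Psi,\vec{\gamma}\cdot\Psi\ra_{\ourD}=\ell^{-1}|\Psi|_{\vec{\gamma}}^{2}$ and likewise for $\Phi$, the two factors of $\ell^{-1/2}$ precisely compensate the $\ell$ produced by $\la\Phi,\vec{\gamma}\cdot\Psi\ra_{\ourD}=\ell\la\Phi,\tilde{\vec{n}}\cdot\Psi\ra_{\ourD}$, yielding $|\la\Phi,\vec{\gamma}\cdot\Psi\ra_{\ourD}|\lesssim|\Phi|_{\vec{\gamma}}|\Psi|_{\vec{\gamma}}$. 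Summing the two contributions gives the stated estimate.

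The only delicate point is ensuring that $\vec{\gamma}$ is indeed future-timelike everywhere the lemma is applied: outside $\{r\ge 3t/4\}$ one has $\zeta\simeq 1$ so the inequality degenerates and follows from the trivial bound via Lemma~\ref{lem1-06-april-2025}, and inside that region the sign of $H^{\Ncal00}$ imposed by \eqref{eq-USA-condition} keeps $\vec{\gamma}$ causal with a quantitative margin, so that the Cauchy--Schwartz step above closes uniformly. Once this is in place the remainder of the proof is a direct computation.
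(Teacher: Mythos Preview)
Your proof is correct and follows essentially the same route as the paper: both decompose $\vec{V}=\vec{\gamma}+(\vec{V}-\vec{\gamma})$, observe that the difference has Cartesian components of size $|H|$ (the paper writes this as $\vec{\gamma}=\vec{V}+H^{\alpha\beta}\del_\alpha(r-t)\del_\beta$), apply Cauchy--Schwartz for the $\vec{\gamma}$ piece, and control the remainder via Lemma~\ref{lem2-13-july-2025}. The only extra content in your argument is spelling out the Cauchy--Schwartz step by normalizing $\vec{\gamma}$ to a unit vector and applying Lemma~\ref{prop2-04-mai-2025}; the paper simply records this inequality in the paragraph immediately preceding the lemma.
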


\begin{proof}
\bse
A direct calculation leads us to
\begin{equation}\label{eq6-06-oct-2025}
\mathrm{grad}(r-t) = \vec{V} + H^{\alpha\beta}\del_{\alpha}(r-t)\del_{\beta}.
\end{equation}
Thus we obtain 
\be
\la \Phi,\mathrm{grad}(r-t)\cdot\Psi\ra_{\ourD} 
= \la\Phi,\vec{V}\cdot\Psi\ra_{\ourD} + \la\Phi,H^{\alpha\beta}\del_{\alpha}(r-t)\del_{\beta}\cdot\Psi \ra_{\ourD}.
\ee
and we apply Lemma~\ref{lem2-13-july-2025}.
\ese
\end{proof}

%---------------------------------------------------------------

We will also establish a high-order version.

\begin{proposition}\label{prop1-06-oct-2025}
Assume \eqref{eq-vect-condition} holds and \eqref{eq-USA-condition} holds for a sufficiently small $\eps_s$. Then for any admissible operator of type $(p,k)$, one has 
\begin{equation}\label{eq10-06-oct-2025}
\aligned
\big|\la \Phi,\mathscr{Z}^I(\vec{V}\cdot\Psi)\ra_{\ourD}\big|
& \lesssim_p 
|\Phi|_{\vec{\gamma}}[\Psi]_{\vec{\gamma},p,k}
+ \frac{\la r-t\ra}{r}\zetab^{-1}|\Phi|_{\vec{n}}[\Psi]_{p,k} 
\\
& \quad + \zetab^{-1}|\Phi|_{\vec{n}}\sum_{p_1+p_2\leq p}[\Psi]_{p_1}|H|_{p_2},
\endaligned
\end{equation}
with the notation
\be
[\Psi]_{\vec{\gamma},p,k}:=\max_{\ord(J)\leq p\atop \rank(J)\leq k}[\mathscr{Z}^J\Psi]_{\vec{\gamma}}.
\ee
\end{proposition}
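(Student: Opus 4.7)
I would prove this by induction on the order $p=\ord(I)$, with Lemma~\ref{lem5-06-oct-2025} (the case $p=0$) as the base case. For the inductive step, write $\mathscr{Z}^I=\mathscr{Z}^{I'}Z$ with $\mathscr{Z}^{I'}$ of type $(p-1,k')$ and a single admissible $Z$. Using the Leibniz rule \eqref{eq4-04-july-2025} for the adapted action, one splits
\[
\mathscr{Z}^I(\vec V\cdot\Psi)=\mathscr{Z}^{I'}\bigl((Z\vec V)\cdot\Psi\bigr)+\mathscr{Z}^{I'}\bigl(\vec V\cdot Z\Psi\bigr).
\]
The second term is treated directly by the inductive hypothesis applied with $\Psi$ replaced by $Z\Psi$, together with the evident bounds $[Z\Psi]_{p-1,k'}\le [\Psi]_{p,k}$ and $[Z\Psi]_{\vec\gamma,p-1,k'}\le [\Psi]_{\vec\gamma,p,k}$. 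The ``bottom'' of the induction produces the leading term $\vec V\cdot\mathscr{Z}^I\Psi$ paired against $\Phi$; here I invoke Lemma~\ref{lem5-06-oct-2025} with $\mathscr{Z}^I\Psi$ in place of $\Psi$, which yields the first right-hand term $|\Phi|_{\vec\gamma}[\Psi]_{\vec\gamma,p,k}$ and an $|H|$-weighted remainder absorbed into the third.

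For the first term $\mathscr{Z}^{I'}((Z\vec V)\cdot\Psi)$, the key is to analyse $Z\vec V$ via Lemma~\ref{lem4-06-oct-2025(l)}, namely
\[
Z\vec V=\mathcal L_Z\vec V+\tfrac12 g^{\mu\nu}\vec V^{\alpha}\pi[Z]_{\alpha\nu}\del_{\mu}.
\]
The deformation-tensor piece is $H$-driven: combining \eqref{eq3-21-aout-2025} with Proposition~\ref{prop1-15-july-2025} and $|\del_\alpha(r-t)|\le 1$, its contribution after applying $\mathscr{Z}^{I'}$ and pairing with $\Phi$ fits into the third right-hand term $\zetab^{-1}|\Phi|_{\vec n}\sum[\Psi]_{p_1}|H|_{p_2}$. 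For the Lie-derivative piece I expand $\vec V$ using \eqref{eq7-05-oct-2025} as $\vec V=\tfrac{t-r}{t}\del_t+t^{-1}\tfrac{x^a}{r}L_a$: every $\mathcal L_Z$ applied to the first summand either differentiates the scalar factor $(t-r)/t$ or produces an admissible translation via $\mathcal L_Z\del_t$, while every $\mathcal L_Z$ applied to the second summand either differentiates the homogeneous factor $t^{-1}(x^a/r)$ of degree $-1$ or produces an admissible commutator of boosts/rotations.

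Iterating inside $\mathscr{Z}^{I'}$ via Proposition~\ref{prop1-15-july-2025} and repeatedly invoking the homogeneity bounds of Section~\ref{section===85} together with \cite[Lemma~6.7]{PLF-YM-PDE} (which provides $\bigl|\bigl((t-r)/t\bigr)\bigr|_{p,k}\lesssim \la r-t\ra/r$ throughout $\Mcal^{\near}_{[s_0,s_1]}$), the resulting terms fall into one of three classes: those retaining an explicit $\la r-t\ra/r$ factor, which after pairing with $\Phi$ contribute the middle right-hand term $\tfrac{\la r-t\ra}{r}\zetab^{-1}|\Phi|_{\vec n}[\Psi]_{p,k}$ via Lemma~\ref{lem2-13-july-2025}; those where a Clifford-multiplication by an admissible vector of size $\lesssim t$ is rescaled by $t^{-1}$, which by Lemma~\ref{lem1-23-july-2025} yields a $\zetab^{-1}$-weighted bound that is then re-absorbed through the inductive hypothesis applied at strictly lower order; and those involving at least one factor of $H$, which enter the third right-hand term.

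The main obstacle is the bookkeeping in the last step: one must verify that every admissible-vector term generated by iterated Lie derivatives of $\vec V$ either survives with an explicit $\la r-t\ra/r$ weight or is of $H$-type, so that no uncontrolled $\zetab^{-1}|\Phi|_{\vec n}[\Psi]_{p,k}$ remainder without the $\la r-t\ra/r$ factor arises at the top order. This is achieved precisely because \eqref{eq7-05-oct-2025} places $\vec V$ into the span of $\frac{t-r}{t}\del_t$ and $t^{-1}L_a$, and commutators of admissible operators with each of these two building blocks preserve this structure modulo deformation-tensor corrections; the induction then closes cleanly.
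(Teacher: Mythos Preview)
Your inductive scheme is sound in outline, but the description of class (B) conceals the one structural point on which the whole argument turns. If you track the building blocks $\frac{t-r}{t}\del_t$ and $t^{-1}(x^a/r)L_a$ \emph{separately} under $\Lcal_Z$, then for $Z=L_b$ the second block produces terms such as $t^{-1}(x^a/r)[L_b,L_a]=-t^{-1}(x^a/r)\Omega_{ab}$ and $(L_b(t^{-1}x^a/r))L_a$, each with components of size $O(1)$. Pairing these against $\Phi$ via Lemma~\ref{lem2-13-july-2025} gives only $\zetab^{-1}|\Phi|_{\vec n}[\Psi]_{p,k}$ with no $\la r-t\ra/r$ weight, and such a term is \emph{not} of the form $\vec V\cdot(\text{spinor})$, so the inductive hypothesis does not apply to it. The phrase ``re-absorbed through the inductive hypothesis'' is therefore unjustified as stated.

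What actually happens is that these $O(1)$ pieces recombine into a multiple of $\vec V$ itself. The paper makes this explicit by computing $\Lcal_Z\vec V$ \emph{as a whole}: for instance $\Lcal_{L_a}\vec V=-(x^a/r)\vec V+\frac{t-r}{r}\bigl(\del_a-(x^a/r)\del_r\bigr)$, and then proving by a separate induction the closed-form decomposition
\[
\Lcal_{\mathscr{Z}}^J\vec V \;=\; C^J\vec V+\tfrac{t-r}{t}A^{Ja}\del_a+t^{-1}B^{Jb}\del_b
\]
with $A,B,C$ homogeneous of degree $\le 0$ (this is \eqref{eq8-06-oct-2025}). The key is that the non-$\vec V$ remainders involve \emph{translations} $\del_a$, not $L_a$ or $\Omega_{ab}$, so both carry the weight $\la r-t\ra/r$ (using $t^{-1}\lesssim\la r-t\ra/r$ in $\Mcal^{\near}_{[s_0,s_1]}$). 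The paper then applies the Leibniz expansion $\mathscr{Z}^I(\vec V\cdot\Psi)=\sum_{I_1\odot I_2=I}\mathscr{Z}^{I_1}\vec V\cdot\mathscr{Z}^{I_2}\Psi$ together with Lemma~\ref{lem1-15-july-2025} to isolate the Lie-derivative part from the $H$-corrections; the $C^J\vec V$ piece is handled by Lemma~\ref{lem5-06-oct-2025} applied to each $\mathscr{Z}^{I_2}\Psi$, and the weighted-translation pieces by Lemma~\ref{lem2-13-july-2025}.

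To close your induction, replace the piece-wise expansion of $\vec V$ by a direct computation of $\Lcal_Z\vec V$ for each admissible $Z$ (as in \eqref{eq11-06-oct-2025}), isolate the $\vec V$-multiple explicitly, and feed only that multiple back into the inductive hypothesis. With this correction your scheme becomes equivalent to the paper's.
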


\begin{proof}
\bse
The proof is based on the decomposition in Lemma~\ref{lem1-15-july-2025}. We remark that in the proof of Lemma~\ref{lem2-23-july-2025}, we have already given the estimate on the high-order terms in the right-hand side of \eqref{eq7-04-july-2025}, which is the estimate on \eqref{eq7-06-oct-2025(l)}. We thus only need to focus on the part of Lie derivatives.

Observe first that 
\begin{equation}\label{eq11-06-oct-2025}
\aligned
\Lcal_{L_a}\vec{V} & =- \frac{x^a}{r}\vec{V} - \frac{x^a}{r}\frac{t-r}{t}\del_r + \frac{t-r}{t}\del_a,
\\
\Lcal_{\del_{\alpha}}\vec{V} & =t^{-1}B_{\alpha}^b\del_b, 
\endaligned
\end{equation}
where $V^{\alpha}\del_{\alpha} = \vec{V}$, and $B_{\alpha}^b$ are homogeneous coefficients of degree zero in $\{3t/4\leq r\leq 4t/3\}$. Thus $V^{\alpha}$ are homogeneous functions of degree zero defined in $\{3t/4\leq r\leq 4t/3\}$. Then we claim 
\begin{equation}\label{eq8-06-oct-2025}
\Lcal_{\mathscr{Z}}^J\vec{V} = \frac{t-r}{t}A^{Ja}\del_a + t^{-1}B^{Jb}\del_b + C^J\vec{V},
\end{equation}
where $A,B$ are sum of homogeneous functions of degree non-positive. This can be checked by induction on $\ord(J)$. The key is that $L_a$ acting on a homogeneous functions of degree zero gives again the same type function, and
\begin{equation}\label{eq9-06-oct-2025}
L_a\Big(\frac{t-r}{t}\Big) = A\frac{t-r}{t}
\end{equation}
with $A$ a homogeneous function of degree zero in $\{3t/4\leq r\leq 4t/3\}$. Suppose that \eqref{eq8-06-oct-2025} holds for $\ord(I)\leq p$. We consider an admissible operator 
$\mathscr{Z}^{I'} = Z\mathscr{Z}^I$. Then when $Z = L_a$
\be
\Lcal_{L_a}\big(\Lcal_{\mathscr{Z}}^{I'}\vec{V}\big) = \Lcal_{L_a}\Big(\frac{t-r}{t}A^{Ja}\del_a + t^{-1}B^{Jb}\del_b + C^J\vec{V}\Big).
\ee
Recall the relation
\be
\Lcal_{L_a}\del_{\alpha} = [L_a,\del_{\alpha}] = \Theta_{a\alpha}^{\beta}\del_{\beta}
\ee
where $\Theta_{a\alpha}^{\beta}$ are constants. Then by \eqref{eq9-06-oct-2025} and the homogeneity, we obtain the desired estimate. When $Z= \del_{\delta}$, we only need to apply the argument on homogeneity. When $Z = \Omega_{ab}$, we apply again the relation $\Omega_{ab} = (x^a/t)L_b-(x^b/t)L_a$.
\ese
%

%--------------------------------------------------
%
\bse
Now we return to the proof of \eqref{eq10-06-oct-2025} and note that
\be
\big|\la \Phi,\mathscr{Z}^I(\vec{V}\cdot\Psi)\ra_{\ourD}\big| 
=
\sum_{\ord(I_1)+\ord(I_2)\leq p\atop \rank(I_1)+\rank(I_2)\leq k}
\big|\la \Phi,\mathscr{Z}^{I_1}\vec{V}\cdot\mathscr{Z}^{I_2}\Psi\ra_{\ourD} \big|.
\ee
Then we apply Lemma~\ref{lem1-15-july-2025} and Lemma~\ref{lem2-23-july-2025} (for the high-order terms in \eqref{eq7-04-july-2025}),
\be
\big|\la \Phi,\mathscr{Z}^{I_1}\vec{V}\cdot\mathscr{Z}^{I_2}\Psi\ra_{\ourD} \big|
\lesssim_p \big|\la \Phi,\Lcal_{\mathscr{Z}}^{I_1}\vec{V}\cdot\mathscr{Z}^{I_2}\Psi\ra_{\ourD} \big|
+ \zetab^{-1}|\Phi|_{\vec{n}}\sum_{p_1+p_2\leq p}[\Psi]_{p_1}|H|_{p_2}.
\ee
In view of \eqref{eq8-06-oct-2025} and Lemma~\ref{lem5-06-oct-2025}, we obtain
\be
\aligned
\big|\la \Phi,\Lcal_{\mathscr{Z}}^{I_1}\vec{V}\cdot\mathscr{Z}^{I_2}\Psi\ra_{\ourD} \big|& \lesssim_p  
\zetab^{-1}\frac{\la r-t\ra}{r}|\Phi|_{\vec{n}}[\Psi]_{p,k}  
+ \big|\la \Phi,\vec{V}\cdot\mathscr{Z}^{I_2}\Psi\ra_{\ourD}\big|
\\
& \lesssim_p  |\Phi|_{\vec{\gamma}}[\Psi]_{\vec{\gamma},p,k}
+ \zetab^{-1}\la r-t\ra r^{-1}|\Phi|_{\vec{n}}[\Psi]_{p,k}
+ \zetab^{-1}|H||\Phi|_{\vec{n}}[\Psi]_{p,k}.
\endaligned
\ee
This leads us to the desired estimate.
\ese
\end{proof}

%----------------------------------------------------------------------------------------------------------------------------

\subsection{Decomposition of the Dirac commutator}

For convenience in the notation, we define
\be
\Psihat_a:= t^{-1}\widehat{L_a}\Psi,
\quad
[\Psihat]_{p,k} := \max_{a=1,2,3}[t^{-1}\widehat{L_a}\Psi]_{p,k}.
\ee
Thanks to the homogeneity property of the admissible vector fields, we have 
\begin{equation}\label{eq2-13-oct-2025}
[\Psihat]_{p,k}\lesssim_p t^{-1}[L\Psi]_{p,k}\lesssim t^{-1}[\Psi]_{p+1,k+1},\quad [\Psihat]_{p,k}\lesssim_p [\del\Psi]_{p,k},
\quad
\text{ in }  \Mcal^{\near}_{[s_0,s_1]}.
\end{equation}

We write \eqref{eq5-21-aout-2025} in the following form:
\begin{equation}\label{eq12-06-oct-2025}
[\widehat{\del_{\delta}},\opDirac]\Psi = - \frac{1}{2}{\pi[\del_{\delta}]_{\Scal}}^{\alpha\beta}\delS_{\alpha}\cdot\nabla_{\delS_{\beta}}\Psi
- \frac{1}{4}\pi[\del_{\delta}]^{\alpha\beta}\nabla_{\alpha}\del_{\beta}\cdot\Psi
- \frac{1}{4}[\del_{\delta},W]\cdot\Psi.
\end{equation}
We focus our attention on the first term in the right-hand side, which is decomposed again as
\begin{equation}\label{eq4-06-oct-2025}
\aligned
{\pi[\del_{\delta}]_{\Scal}}^{\alpha\beta}\delS_{\alpha}\cdot\nabla_{\delS_{\beta}}\Psi
& = 
{\pi[\del_{\delta}]_{\Scal}}^{\alpha b}\delS_{\alpha}\cdot\nabla_{\delSs_b}\Psi
+ {\pi[\del_{\delta}]_{\Scal}}^{\alpha0}\delS_{\alpha}\cdot\nabla_{\del_{\delta}}\Psi
\\
& = {\pi[\del_{\delta}]_{\Scal}}^{\alpha b}\delS_{\alpha}\cdot\nabla_{\delSs_b}\Psi
+ \sum_{\alpha\neq 1}
{\pi[\del_{\delta}]_{\Scal}}^{\alpha 0}\delS_{\alpha}\cdot\nabla_{\del_{\delta}}\Psi
+{\pi[\del_{\delta}]_{\Scal}}^{1 0}\vec{V}\cdot\nabla_{\del_{\delta}}\Psi
\endaligned
\end{equation}
We then perform the following calculation for the first term in the right-hand side:
\begin{equation}
\aligned
\nabla_{\vec{V}}\Psi & =\frac{t-r}{t}\nabla_{\del_{\delta}}\Psi + t^{-1}(x^a/r)\nabla_{L_a}\Psi
\\
& = \frac{t-r}{t}\widehat{\del_{\delta}}\Psi + \frac{x^a}{rt}\widehat{L_a}\Psi
+  \frac{t-r}{4t}g^{\mu\nu}\del_{\mu}\cdot\nabla_{\nu}\del_{\delta}\cdot\Psi 
+ \frac{x^a}{4rt}g^{\mu\nu}\cdot\del_{\mu}\cdot\nabla_{\nu}L_a\cdot\Psi
\\
& = \frac{t-r}{t}\widehat{\del_{\delta}}\Psi + \frac{x^a}{r}\Psihat_a
+  \frac{t-r}{4t}g^{\mu\nu}\del_{\mu}\cdot\nabla_{\nu}\del_{\delta}\cdot\Psi 
+ \frac{x^a}{4rt}g^{\mu\nu}\cdot\del_{\mu}\cdot\nabla_{\nu}L_a\cdot\Psi,
\\
\nabla_{\vec{S}_i}\Psi & =S_i^at^{-1}\nabla_{L_a}\Psi = S_i^a\Psihat_a + \frac{1}{4t}S_i^ag^{\mu\nu}\del_{\mu}\cdot\nabla_{\nu}L_a\cdot\Psi.
\endaligned
\end{equation}
where we recall that $S_i^a$a are homogeneous of degree zero. 

%------------------------------------------------------------------------------------

Most importantly, the ``bad'' components enjoy good coefficients, while the ``bad'' coefficients meet the good components. The terms $\widehat{\del}\Psi$ have a factor $\frac{t-r}{t}$, while the terms $\Psihat_a$ (as we will see in the bootstrap argument) enjoy better estimates. We thus establish the following estimate.

\begin{lemma}\label{lem6-06-oct-2025}
Assume \eqref{eq-vect-condition} holds and \eqref{eq-USA-condition} holds for a sufficiently small $\eps_s$, the following estimate holds in $\Mcal^{\near}_{[s_0,s_1]}$:
\begin{equation}
\aligned
& \zetab\big[{\pi[\del_{\delta}]_{\Scal}}^{\alpha b}\delS_{\alpha}\cdot\nabla_{\delSs_b}\Psi\big]_{p,k}
+\zetab\sum_{\alpha\neq 1}\big[{\pi[\del_{\delta}]_{\Scal}}^{\alpha 0}\delS_{\alpha}\cdot\nabla_{\del_{\delta}}\Psi\big]_{p,k}
\\
& \lesssim_p\frac{\la r-t\ra}{r}\hspace{-0.3cm}\sum_{p_1+p_2\leq p\atop k_1+k_2\leq k}|\del H|_{p_1,k_1}[\del\Psi]_{p_2,k_2} 
+\sum_{p_1+p_2\leq p\atop k_1+k_2\leq k}|\delsN H|_{p_1,k_1}[\del \Psi]_{p_2,k_2}
\\
& \quad + \sum_{p_1+p_2\leq p\atop k_1+k_2\leq k}|\del H|_{p_1,k_1}\big([\Psihat]_{p_2,k_2} + r^{-1}[\Psi]_{p_2,k_2}\big)
+\Cubic^+_p[\Psi] + \Cubic^+_p[\del\Psi].
\endaligned
\end{equation}
\end{lemma}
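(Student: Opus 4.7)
The plan is to exploit the \emph{quasi-null structure} underlying the two sums: in the first, the tangential derivatives $\delSs_b$ compensate for potentially generic components of $\pi[\del_\delta]_\Scal$, while in the second, the restriction $\alpha\neq 1$ permits the sharpened gauge estimate of Proposition~\ref{prop2-06-oct-2025} to be applied to the $\pi$ factor itself, so that the ``bad'' normal derivative $\nabla_{\del_\delta}\Psi$ becomes innocuous. In both cases, I would first expand each $\mathscr{Z}^I$ across the triple product via the Leibniz rule (Proposition~\ref{prop1-15-july-2025}), noting that the coefficients $\PhiS$, $\PsiS$ of the adapted frame are homogeneous of degree zero on $\{3t/4\le r\le 4t/3\}$ by \eqref{eq7-05-oct-2025}--\eqref{eq9-05-oct-2025}, so admissible derivatives falling on them remain uniformly bounded and can simply be absorbed into the estimates.

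For the first sum, I would combine the generic bound $|{\pi[\del_\delta]_\Scal}^{\alpha b}|_{p,k}\lesssim |\del H|_{p,k}+\sum |H|_{\cdot}|\del H|_{\cdot}$ (coming from Corollary~\ref{cor1-21-aout-2025} together with Lemma~\ref{lem1-06-oct-2025}) with an explicit decomposition of $\nabla_{\delSs_b}\Psi$: using $\vec V=\tfrac{t-r}{t}\del_t+t^{-1}\tfrac{x^a}{r}L_a$ and $\vec S_i=t^{-1}S_i^a L_a$ one writes
\[
\nabla_{\delSs_b}\Psi \;=\; \tfrac{t-r}{t}\,\widehat{\del_\alpha}\Psi \;+\; (\text{deg }0)\cdot \Psihat_a \;+\; \tfrac14 g^{\mu\nu}\del_\mu\cdot\nabla_\nu\delSs_b\cdot\Psi,
\]
where the last term has $|\nabla_\nu\delSs_b|\lesssim r^{-1}+|\del H|$. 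Applying Proposition~\ref{prop1-15-july-2025}(3) and Lemma~\ref{lem2-18-july-2025} to pass $\mathscr{Z}^I$ through both the modified derivative and the Clifford factors (at cost of the $\zetab^{-1}$ which is exactly balanced by the $\zetab$ prefactor on the left) produces the two leading contributions $\frac{\la r-t\ra}{r}|\del H|_{p_1,k_1}[\del\Psi]_{p_2,k_2}$ and $|\del H|_{p_1,k_1}\bigl([\Psihat]_{p_2,k_2}+r^{-1}[\Psi]_{p_2,k_2}\bigr)$, while the Clifford-curvature corrections are quadratic in $\del H$ and lodge in $\Cubic^+_p[\del\Psi]+\Cubic^+_p[\Psi]$.

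For the second sum, the crucial observation is that the restriction $\alpha\neq 1$ places the index pair $(\alpha,0)$ precisely among the good components to which Proposition~\ref{prop2-06-oct-2025} applies, giving the key bound
\[
|{\pi[\del_\delta]_\Scal}^{\alpha 0}|_{p,k}\;\lesssim\; \tfrac{\la r-t\ra}{r}|\del H|_{p,k}+|\delsN H|_{p,k}+\sum |H|_{\cdot}|\del H|_{\cdot}.
\]
Then expanding $\nabla_{\del_\delta}\Psi = \widehat{\del_\delta}\Psi + \tfrac14 g^{\mu\nu}\del_\mu\cdot\nabla_\nu\del_\delta\cdot\Psi$ and multiplying by $\delS_\alpha$ and by the above bound, the principal part produces the first two terms on the right-hand side of the lemma (using \eqref{eq3-16-july-2025} to control the $\delS_\alpha\cdot$ action), while the Clifford term involves an extra $|\del H|$ factor and is absorbed into $\Cubic^+_p[\Psi]$.

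The main technical obstacle, though largely bookkeeping, is the careful tracking of how the orders $(p_i,k_i)$ distribute among the three factors of each product after the Leibniz expansion, and verifying that every subleading Clifford correction--each carrying one more $g^{\mu\nu}\del_\mu\cdot\nabla_\nu X$ factor--populates only the cubic buckets $\Cubic^+_p[\Psi]$ and $\Cubic^+_p[\del\Psi]$. This relies crucially on the fact that $L_a$ and $\Omega_{ab}$ applied to degree-zero homogeneous coefficients (including the factor $\frac{t-r}{t}$, thanks to \eqref{eq9-06-oct-2025}) preserve their boundedness, so that the quasi-null smallness $\la r-t\ra/r$ is never destroyed by commutation with admissible fields.
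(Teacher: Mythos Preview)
Your proposal is correct and follows essentially the same approach as the paper's proof: decompose $\nabla_{\delSs_b}\Psi$ via the explicit expressions for $\vec V$ and $\vec S_i$ to separate the $\frac{t-r}{t}\widehat{\del}\Psi$, $\Psihat_a$, and Clifford-correction pieces, then for the $\alpha\neq 1$ sum invoke Proposition~\ref{prop2-06-oct-2025} on the gauge-improved components ${\pi[\del_\delta]_\Scal}^{\alpha 0}$, handling all products through Proposition~\ref{prop1-15-july-2025} and the degree-zero homogeneity of $\PhiS,\PsiS$. The paper's write-up differs only cosmetically, listing the six expanded terms explicitly and reducing them to natural-frame quantities before applying \eqref{eq6-16-july-2025}, \eqref{eq3-23-july-2025}, and Proposition~\ref{prop2-06-oct-2025}.
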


\begin{proof}
\bse
We need to bound the following terms (where we omit the zero-degree homogeneous coefficients):
\be
\aligned
& \frac{t-r}{t}{\pi[\del_{\delta}]_{\Scal}}^{\alpha b}\delS_{\alpha}\cdot\widehat{\del_t}\Psi,
\quad
&&{\pi[\del_{\delta}]_{\Scal}}^{\alpha b}\delS_{\alpha}\cdot \Psihat_a,
\\
& \frac{t-r}{t}g^{\mu\nu}{\pi[\del_{\delta}]_{\Scal}}^{\alpha b}\delS_{\alpha}\cdot\del_{\mu}\cdot\nabla_{\nu}\del_{\gamma}\cdot\Psi,
\quad
&&t^{-1}g^{\mu\nu}{\pi[\del_{\delta}]_{\Scal}}^{\alpha b}\delS_{\alpha}\cdot\del_{\mu}\cdot\nabla_{\nu}L_a\cdot\Psi,
\\
&{\pi[\del_{\delta}]_{\Scal}}^{\alpha 0}\del_{\gamma}\cdot\widehat{\del_t}\Psi,
\quad
&&g^{\mu\nu}{\pi[\del_{\delta}]_{\Scal}}^{\alpha0}\delS_{\alpha}\cdot\del_{\mu}\cdot\nabla_{\nu}\del_t\cdot\Psi,
\quad
\alpha=0,2,3.
\endaligned
\ee
Recalling that the elements of transition matrices $\PhiS$ and $\PsiS$ are also homogeneous of zero order, we omit these coefficients and we only need to estimate
\begin{equation}\label{eq1-07-oct-2025}
\aligned
& \frac{t-r}{t}
\pi[\del_{\delta}]^{\alpha b}\del_{\gamma}\cdot\widehat{\del_{\mu}}\Psi,
\quad
&& \pi[\del_{\delta}]^{\alpha b}\del_{\gamma}\cdot t^{-1}\widehat{L_a}\Psi,
\\
&{\pi[\del_{\delta}]_{\Scal}}^{\alpha 0}\del_{\gamma}\cdot\widehat{\del_t}\Psi,\quad \alpha=0,2,3,
\\
&g^{\mu\nu}\pi[\del_{\delta}]^{\alpha b}\del_{\gamma}\cdot\del_{\mu}\cdot\nabla_{\nu}\del_{\gamma}\cdot\Psi,
\quad
&&t^{-1}g^{\mu\nu}\pi[\del_{\delta}]^{\alpha b}\del_{\gamma}\cdot\del_{\mu}\cdot\nabla_{\nu}L_a\cdot\Psi.
\endaligned
\end{equation}
The first term is bounded by \eqref{eq6-16-july-2025}, except that in the present case we have our standard coefficient $\frac{t-r}{t}$ for $\widehat{\del}\Psi$. 
Then we apply \eqref{eq6-16-july-2025}. The last two terms we remark that
\be
g^{\mu\nu}\nabla_{\nu}\del_\gamma = g^{\mu\nu}\Gamma_{\nu\gamma}^{\lambda}\del_{\lambda},
\quad 
g^{\mu\nu}\nabla_{\nu}L_a = g^{\mu\nu}\del_{\nu}L_a^{\gamma}\del_{\gamma} 
+ g^{\mu\nu}L_a^{\gamma}\Gamma_{\mu\gamma}^{\lambda}\del_{\lambda}. 
\ee
Thanks to Lemma~\ref{lem1-11-july-2025} and \eqref{eq-vect-condition}, we also have 
\begin{equation}
|\Gamma_{\nu\gamma}^{\lambda}|_{p,k}\lesssim_p |\del H|_{p,k} + \sum_{p_1+p_2\leq p\atop k_1+k_2\leq k}|H|_{p_1,k_1}|\del H|_{p_2,k_2}.
\end{equation}
Then thanks to \eqref{eq3-23-july-2025},
\begin{equation}
\aligned
\big[g^{\mu\nu}\pi[\del_{\delta}]^{\alpha b}\del_{\gamma}\cdot\del_{\mu}\cdot\nabla_{\nu}\del_{\gamma}\cdot\Psi\big]_{p,k}
& \lesssim_p  \Cubic_{p,k}(H,\del H,\Psi) + \Cubic(\del H,\del H,\Psi)\lesssim \Cubic^+_p[\Psi],
\\
\big[g^{\mu\nu}\pi[\del_{\delta}]^{\alpha b}\del_{\gamma}\cdot\del_{\mu}\cdot\nabla_{\nu}L_a\cdot\Psi\big]_{p,k}
& \lesssim_p \sum_{p_1+p_2\leq p\atop k_1+k_2\leq k}|\del H|_{p_1,k_1}[\Psi]_{p_2,k_2} 
\\
& \quad + t\Cubic_{p,k}(H,\del H,\Psi) + t\Cubic_{p,k}(\del H,\del H,\Psi)
\\
& \lesssim_p \sum_{p_1+p_2\leq p\atop k_1+k_2\leq k}|\del H|_{p_1,k_1}[\Psi]_{p_2,k_2} + t\Cubic^+_p[\Psi].
\endaligned
\end{equation}
For the third term in \eqref{eq1-07-oct-2025}, we rely on Proposition~\ref{prop2-06-oct-2025}:
\be
\aligned
& \zetab\big[{\pi[\del_{\delta}]_{\Scal}}^{\alpha0}\del_{\gamma}\cdot\widehat{\del_t}\Psi\big]_{p,k}
\\
& \lesssim_p\sum_{p_1+p_2\leq p\atop k_1+k_2\leq k}
\big|{\pi[\del_{\delta}]_{\Scal}}^{\alpha0}\big|_{p_1,k_1}[\widehat{\del_t}\Psi]_{p_2,k_2}
\\
& \lesssim_p \frac{\la r-t\ra}{r}\sum_{p_1+p_2\leq p\atop k_1+k_2\leq k}
|\del H|_{p_1,k_1}[\del\Psi]_{p_2,k_2} 
+ r^{-1}\sum_{p_1+p_2\leq p\atop k_1+k_2\leq k}|\delsN H|_{p_1,k_1}[\del\Psi]_{p_2,k_2}
\\
& \quad+\Cubic_p(H,\del H,\del\Psi). \qedhere 
\endaligned
\ee
\ese
\end{proof}

%---------------------------------------------------------

We then establish the high-order estimate on the commutator $[\del_{\delta},\opDirac]$.

\begin{lemma}\label{lem1-13-oct-2025}
Assume \eqref{eq-vect-condition} holds and \eqref{eq-USA-condition} holds for a sufficiently small $\eps_s$. Let $\Phi,\Psi$ be sufficiently regular spinor fields and $\mathscr{Z}^I$ be an admissible operator of type $(p,k)$. Then in $\Mcal^{\near}_{[s_0,s_1]}$, one has
\begin{equation}
\aligned
& \zetab\big|\la\Phi,\mathscr{Z}^I([\del_{\delta},\opDirac]\Psi)\ra_{\ourD}\big|
\\
& \lesssim_p |\Phi|_{\vec{n}}\sum_{p_1+p_2\leq p\atop k_1+k_2\leq k}\big([\del\Psi]_{p_1,k_1} + [\Psi]_{p_1,k_1}\big)\Big(\frac{\la r-t\ra}{r}|\del H|_{p_2,k_2} + |\delsN H|_{p_2,k_2}\Big)
\\
& \quad+\zetab|\Phi|_{\vec{\gamma}}\sum_{p_1+p_2\leq p\atop k_2+k_2\leq k}
[\Psi]_{\vec{\gamma},p_1,k_1}|\del H|_{p_2,k_2}
+|\Phi|_{\vec{n}}\sum_{p_1+p_2\leq p\atop k_1+k_2\leq k}[\Psihat]_{p_1,k_1}|\del H|_{p_2,k_2}
\\
& \quad
+ |\Phi|_{\vec{n}}\big(\Hcom_p[\Psi] + \Hcom_p[\del\Psi]\big).
\endaligned
\end{equation}
\end{lemma}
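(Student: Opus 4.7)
The plan is to expand $[\widehat{\del_\delta},\opDirac]\Psi$ via Proposition~\ref{prop1-16-july-2025} (restated as \eqref{eq12-06-oct-2025}) and then split the leading piece $\pi[\del_\delta]_{\Scal}^{\alpha\beta}\delS_\alpha\cdot\nabla_{\delS_\beta}\Psi$ according to the frame decomposition \eqref{eq4-06-oct-2025}. This yields three categories of contributions: (i) a ``tangential'' part $\pi[\del_\delta]_{\Scal}^{\alpha b}\delS_\alpha\cdot\nabla_{\delSs_b}\Psi + \sum_{\alpha\neq 1}\pi[\del_\delta]_{\Scal}^{\alpha 0}\delS_\alpha\cdot\nabla_{\del_t}\Psi$; (ii) a critical ``null'' part $\pi[\del_\delta]_{\Scal}^{10}\,\vec V\cdot\nabla_{\del_t}\Psi$; and (iii) the two remaining pieces $\pi[\del_\delta]^{\alpha\beta}\nabla_\alpha\del_\beta\cdot\Psi$ and $[\del_\delta,W]\cdot\Psi$. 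Each category is paired against $\Phi$ using a \emph{different} Cauchy--Schwartz inequality, which is the key architectural point.

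For category (i), I apply $\mathscr{Z}^I$, take the $[\,\cdot\,]_{\vec n}$-norm via Lemma~\ref{lem6-06-oct-2025}, and pair with $\Phi$ by Corollary~\ref{cor1-18-aout-2025}. This directly produces the first line of the desired bound together with the $[\Psihat]_{p_1,k_1}|\del H|_{p_2,k_2}$ piece and the cubic $\Hcom_p[\Psi] + \Hcom_p[\del\Psi]$ remainder, the good factor $\tfrac{\la r-t\ra}{r}|\del H|+|\delsN H|$ coming from the wave-coordinate-based control of the ``good'' components of $\pi[\del_\delta]_{\Scal}$ established in Propositions~\ref{prop1-06-oct-2026} and~\ref{prop2-06-oct-2025}.

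For the critical category (ii), I rewrite $\nabla_{\del_t}\Psi = \widehat{\del_t}\Psi - \tfrac14 g^{\mu\nu}\del_\mu\cdot\nabla_\nu\del_t\cdot\Psi$ and expand $\mathscr{Z}^I\big(\pi[\del_\delta]_{\Scal}^{10}\,\vec V\cdot(\cdots)\big)$ by Leibniz over all admissible partitions $I=I_1\odot I_2$. The coefficient $|\pi[\del_\delta]_{\Scal}^{10}|_{p_1,k_1}$ is bounded by Proposition~\ref{prop2-06-oct-2025}, while the bilinear quantity $\la\Phi,\mathscr{Z}^{I_2}(\vec V\cdot\widehat{\del_t}\Psi)\ra_{\ourD}$ is controlled by Proposition~\ref{prop1-06-oct-2025}. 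The $|\Phi|_{\vec\gamma}[\,\cdot\,]_{\vec\gamma,p_2,k_2}$ branch of that estimate, combined with the $|\del H|_{p_1,k_1}$ factor (after absorbing $\tfrac{\la r-t\ra}{r}\le 1$ and $|\delsN H|\lesssim|\del H|$ into the $\vec\gamma$-line), yields the distinguished $\zetab|\Phi|_{\vec\gamma}\sum [\Psi]_{\vec\gamma,p_1,k_1}|\del H|_{p_2,k_2}$ term. The $\tfrac{\la r-t\ra}{r}\zetab^{-1}|\Phi|_{\vec n}[\,\cdot\,]_{p,k}$ branch absorbs into the first line after multiplying by $\zetab$, and the quadratic-in-$H$ tail of Proposition~\ref{prop1-06-oct-2025} merges into $\Hcom_p[\Psi]$. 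The lower-order term coming from $-\tfrac14 g^{\mu\nu}\del_\mu\cdot\nabla_\nu\del_t\cdot\Psi$ is cubic via Lemma~\ref{lem1-01-aout-2025} and again feeds $\Hcom_p[\Psi]$.

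For category (iii), the estimate \eqref{eq7-16-july-2025} shows that $[\pi[\del_\delta]^{\alpha\beta}\nabla_\alpha\del_\beta\cdot\Psi]_{p,k}$ is purely cubic in $(|\del H|,|H|,[\Psi])$ and is absorbed by $\Hcom_p[\Psi]$ after Cauchy--Schwartz with $\Phi$; the wave-gauge commutator $[\del_\delta,W]\cdot\Psi$ is treated analogously using $|[\del_\delta,W]|_{p,k}\lesssim|\del W|_{p,k}$, whose contribution merges into $\Hcom_p[\Psi]$ under the standing assumptions. The main obstacle is step (ii): the ``null'' coefficient $\pi[\del_\delta]_{\Scal}^{10}$ does not itself carry a $\tfrac{\la r-t\ra}{r}$ factor, so the necessary near-cone decay must come from pairing $\vec V$ against $\Phi$ through the timelike gradient $\vec\gamma$, and the Leibniz bookkeeping has to ensure that the ``bad component'' $\vec V\cdot\nabla_{\del_t}\Psi$ is never paired with the ``bad coefficient'' in the coarser $|\Phi|_{\vec n}$ norm. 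This is precisely what Propositions~\ref{prop1-06-oct-2025} and~\ref{prop2-06-oct-2025} are designed to enable.
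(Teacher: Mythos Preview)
Your proposal is correct and follows essentially the same architecture as the paper's proof: the same three-way split $T_1,T_2,T_3$ of \eqref{eq12-06-oct-2025}--\eqref{eq4-06-oct-2025}, the same application of Lemma~\ref{lem6-06-oct-2025} for the tangential block, and the same use of Proposition~\ref{prop1-06-oct-2025} after a Leibniz expansion to extract the $|\Phi|_{\vec\gamma}[\,\cdot\,]_{\vec\gamma}$ contribution from the critical $\pi[\del_\delta]_{\Scal}^{10}\vec V\cdot\nabla_{\del_t}\Psi$ piece.

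Two small corrections. First, Proposition~\ref{prop2-06-oct-2025} does \emph{not} cover the component $\pi[\del_\delta]_{\Scal}^{10}$ (only $00,02,03$); for $T_1$ you only need the crude bound $|\pi[\del_\delta]_{\Scal}^{10}|_{p,k}\lesssim |\del H|_{p,k}$, which follows from Lemma~\ref{lem1-06-oct-2025} and Corollary~\ref{cor1-21-aout-2025} via the degree-zero homogeneity of $\PhiS,\PsiS$---this is exactly what the paper does. Your subsequent reasoning already uses this crude factor, so the miscitation is cosmetic. Second, the wave-gauge piece $[\del_\delta,W]\cdot\Psi$ produces a contribution of type $\Hwave_p[\Psi]$, not $\Hcom_p[\Psi]$; the lemma's statement omits $\Hwave_p$ (likely a typo, since it reappears downstream in Proposition~\ref{prop1-14-oct-2025}), so your merging claim should be read accordingly.
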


\begin{proof}
\bse
We write the decomposition~\eqref{eq12-06-oct-2025} and \eqref{eq4-06-oct-2025} into the following form:
\begin{equation}
\aligned
\, [\widehat{\del_{\delta}},\opDirac]\Psi & =- \frac{1}{2}{\pi[\del_{\delta}]_{\Scal}}^{1 0}\vec{V}\cdot\nabla_{\del_{\delta}}\Psi
- \frac{1}{2}\Big({\pi[\del_{\delta}]_{\Scal}}^{\alpha b}\delS_{\alpha}\cdot\nabla_{\delSs_b}\Psi
+ \sum_{\alpha\neq 1}
{\pi[\del_{\delta}]_{\Scal}}^{\alpha 0}\delS_{\alpha}\cdot\nabla_{\del_{\delta}}\Psi\Big)
\\
& \quad - \frac{1}{4}\big(\pi[\del_{\delta}]^{\alpha\beta}\nabla_{\alpha}\del_{\beta}\cdot\Psi
+ [\del_{\delta},W]\cdot\Psi\big)
\\
=:& T_1 + T_2 + T_3.
\endaligned
\end{equation}
The terms $T_3$ are bounded by \eqref{eq7-18-july-2025} and \eqref{eqs1-07-oct-2025}:
\begin{equation}
[T_3]_{p,k}\lesssim_p \Cubic^+_p[\Psi] + \Hwave_p[\Psi].
\end{equation}
For $T_2$, we apply Lemma~\ref{lem6-06-oct-2025}. For the term $T_1$, we establish the following estimate:
\begin{equation}
\aligned
& \big|\big\la\Phi,\mathscr{Z}^I\big({\pi[\del_{\delta}]_{\Scal}}^{10}\vec{V}\cdot\nabla_{\del_{\delta}}\Psi\big)\big\ra_{\ourD}\big|
\\
& \lesssim_p|\Phi|_{\vec{\gamma}}\sum_{p_1+p_2\leq p\atop k_2+k_2\leq k}
|\del H|_{p_1,k_1}[\Psi]_{\vec{\gamma},p_2,k_2}
+\frac{\la r-t\ra}{r}\zetab^{-1}|\Phi|_{\vec{n}}
\sum_{p_1+p_2\leq p\atop k_1+k_2\leq k}[\del H]_{p_1,k_1}[\Psi]_{p_2,k_2}
\\
& \quad+\zetab^{-1}|\Phi|_{\vec{n}}\Cubic^+_p[\Psi].
\endaligned
\end{equation}
To see this, we note that the elements of $\PhiS,\PsiS$ are homogeneous of degree zero. We thus only need to treat
\be
\big\la\Phi,\mathscr{Z}^I\big(\pi[\del_{\delta}]^{\alpha\beta}\vec{V}\cdot\widehat{\del_t}\Psi\big)\big\ra_{\ourD}.
\ee
We apply Proposition~\ref{prop1-06-oct-2025} and obtain 
\be
\big\la\Phi,\mathscr{Z}^J\big(\pi[\del_{\delta}]^{\alpha\beta}\vec{V}\cdot\widehat{\del_t}\Psi\big)\big\ra_{\ourD}
= \sum_{J_1\odot J_2=J}\mathscr{Z}^{J_1}(\pi[\del_{\delta}]^{\alpha\beta})\big\la\Phi,\mathscr{Z}^{J_2}\big(\vec{V}\cdot\Psi\big) \big\ra_{\ourD}.
\ee
We estimate each term in the right-hand side. Recalling \eqref{eq1-17-july-2025} and Proposition~\ref{prop1-06-oct-2025}
\begin{equation}
\aligned
& \big|\big\la\Phi,\mathscr{Z}^J\big(\pi[\del_{\delta}]^{\alpha\beta}\vec{V}\cdot\widehat{\del_t}\Psi\big)\big\ra_{\ourD}\big|
\\
& \lesssim_p  \sum_{J_1\odot J_2=J}
\big|\mathscr{Z}^{J_1}(\pi[\del_{\delta}]^{\alpha\beta})\big|
\big|\big\la\Phi,\mathscr{Z}^{J_2}\big(\vec{V}\cdot\Psi\big) \big\ra_{\ourD}\big|
\\
& \lesssim_p  |\Phi|_{\vec{\gamma}}\sum_{p_1+p_2\leq p\atop k_2+k_2\leq k}
|\del H|_{p_1,k_1}[\Psi]_{\vec{\gamma},p_2,k_2}
+\frac{\la r-t\ra}{r}\zetab^{-1}|\Phi|_{\vec{n}}
\sum_{p_1+p_2\leq p\atop k_1+k_2\leq k}[\del H]_{p_1,k_1}[\Psi]_{p_2,k_2}
\\
& \quad +\zetab^{-1}|\Phi|_{\vec{n}}\Hcom_p[\Psi].
\endaligned
\end{equation}
\ese
\end{proof}

%---------------------------------------------------------------------------------------------------------------------------------

\subsection{Conclusion}

Now we are ready to give the estimate on Dirac commutators in $\Mcal^{\near}_{[s_0,s_1]}$.

\begin{proposition}\label{prop1-14-oct-2025}
Assume that the conditions \eqref{eq-vect-condition} and \eqref{eq-USA-condition} hold for a sufficiently small $\eps_s$. Consider any sufficiently regular spinor field $\Psi$ and any admissible vector field $Z$. Then one has the following estimate for any admissible operator $\mathscr{Z}^I$ of type $(p,k)$ in $\Mcal^{\near}_{[s_0,s_1]}$:
\begin{equation}
\aligned
& \zetab\big|\big\la\Phi,[\mathscr{Z}^I,\opDirac]\Psi\big\ra_{\ourD}\big|
\\
& \lesssim_p   
|\Phi|_{\vec{n}}\hspace{-0.3cm}\sum_{p_1+p_2\leq p\atop k_1+k_2\leq k}
[\del\Psi]_{p_1-1,k_1-1}|H|_{p_2+1,k_2+1}
\\
& \quad+|\Phi|_{\vec{n}}\hspace{-0.3cm}\sum_{p_1+p_2\leq p\atop k_1+k_2\leq k}
\big([\del\Psi]_{p_1-1,k_1} + [\Psi]_{p_1-1,k_1}\big)
\Big(\frac{\la r-t\ra}{r}|\del H|_{p_2,k_2} + |\delsN H|_{p_2,k_2}\Big)
\\
& \quad+\zetab|\Phi|_{\vec{\gamma}}\sum_{p_1+p_2\leq p\atop k_1+k_2\leq k}
[\Psi]_{\vec{\gamma},p_1-1,k_1}|\del H|_{p_2,k_2}
+t^{-1}|\Phi|_{\vec{n}}\hspace{-0.3cm}\sum_{p_1+p_2\leq p\atop k_1+k_2\leq k}
[\Psi]_{p_1,k_1+1}|\del H|_{p_2,k_2}
\\
& \quad+|\Phi|_{\vec{n}}\big(\Cubic^+_{p-1}[\Psi] + \Cubic^+_{p-1}[\del\Psi] + \Hwave_{p-1}[\Psi]\big).
\endaligned
\end{equation}
\end{proposition}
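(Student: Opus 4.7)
The plan is to proceed by induction on $\ord(I)=p$, paralleling the inductive structure of Proposition~\ref{prop1-19-july-2025} but systematically replacing the rough $[\del\Psi]\,|\del H|$ bounds by the finer quasi-null bounds that exploit the light-bending structure and the generalized wave coordinate conditions in $\Mcal^{\near}_{[s_0,s_1]}$. The base case $\ord(I)=1$ with $Z=\del_{\delta}$ is contained in Lemma~\ref{lem1-13-oct-2025} (after Cauchy--Schwarz in $\la\Phi,\cdot\ra_{\ourD}$). The base case with $Z=L_a$ or $\Omega_{ab}$ follows from the decomposition~\eqref{eq5-21-aout-2025} combined with~\eqref{eq6-18-july-2025}--\eqref{eq8-18-july-2025} and the wave-gauge remainder estimate~\eqref{eqs1-07-oct-2025}; because $\pi[L_a]$ and $\pi[\Omega_{ab}]$ are built purely from $L_aH$, $\Omega_{ab}H$ and algebraic $H$-contributions (Lemma~\ref{lem3-06-oct-2025(l)}), this step generates only the $[\del\Psi]_{p_1-1,k_1-1}|H|_{p_2+1,k_2+1}$ contribution together with $\Cubic^+$ and $\Hwave$ remainders, and no bare $|\del H|$ factor appears.

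For the inductive step I would write $\mathscr{Z}^{I'}=\mathscr{Z}^I Z$ and split
\[
[\mathscr{Z}^{I'},\opDirac]\Psi=\mathscr{Z}^I\bigl([Z,\opDirac]\Psi\bigr)+[\mathscr{Z}^I,\opDirac](Z\Psi).
\]
The second summand is handled by the inductive hypothesis applied to $Z\Psi$, together with the domination $\Cubic^+_{p-1}[Z\Psi]+\Hwave_{p-1}[Z\Psi]\lesssim\Cubic^+_p[\Psi]+\Hwave_p[\Psi]$ and the ordering inequality~\eqref{eq3-09-oct-2025(l)} (together with Lemma~\ref{lem2-18-july-2025}) to convert $[\del(Z\Psi)]_{p_1-1,k_1-1}$ back into $[\del\Psi]_{p_1,k_1-1}$ when $Z=\del_\delta$ (respectively $[\del\Psi]_{p_1,k_1}$ when $Z=L_a,\Omega_{ab}$) up to a commutator error absorbed into $\Cubic^+_{p-1}[\Psi]$; after re-indexing the sums $p_1\mapsto p_1+1$ (and $k_1\mapsto k_1+1$ when $Z$ is a boost or rotation) the contributions fit precisely the first line of the RHS. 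For the first summand, if $Z=\del_\delta$ I invoke Lemma~\ref{lem1-13-oct-2025} at order $(p-1,k)$ applied with the outer operator $\mathscr{Z}^I$, producing the second and third lines of the statement; the reduction~\eqref{eq2-13-oct-2025} converts the $[\Psihat]_{p_1,k_1}$ factor appearing there into the $t^{-1}[\Psi]_{p_1,k_1+1}$ term on line three. If instead $Z=L_a$ or $\Omega_{ab}$, I expand $[Z,\opDirac]\Psi$ via Proposition~\ref{prop1-16-july-2025}, distribute $\mathscr{Z}^I$ by the Leibniz rule on admissible partitions, bound the deformation tensor factors via Corollary~\ref{cor1-21-aout-2025} (which delivers the $|H|_{p_2+1,k_2+1}$ weight), and pair against $\Phi$ using Lemma~\ref{lem2-13-july-2025} and Proposition~\ref{prop1-21-july-2025}; the remaining subleading pieces fall into $\Cubic^+_{p-1}$ and $\Hwave_{p-1}$ via~\eqref{eq7-18-july-2025}--\eqref{eq8-18-july-2025}.

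The main obstacle will be ensuring that the quasi-null structure is preserved throughout the induction, specifically in the subcase $Z=\del_\delta$. The point is that the wave-gauge identities of Propositions~\ref{prop1-06-oct-2026}--\ref{prop2-06-oct-2025} allow the replacement of every component of $\pi[\del_\delta]^{\alpha\beta}$ multiplying a bad derivative by the combination $\tfrac{\la r-t\ra}{r}|\del H|+|\delsN H|$ (plus quadratic $H\cdot\del H$ remainders), while the frame decomposition~\eqref{eq4-06-oct-2025} relative to $\{\delS_\alpha\}$ paired with Lemma~\ref{lem2-06-oct-2025} transfers the $\delSs$-derivatives into $\tfrac{\la r-t\ra}{r}\del$ or $\delsN$. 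The residual coupling to $\vec V$ is then absorbed by the $|\Phi|_{\vec\gamma}[\Psi]_{\vec\gamma,p_1,k_1}|\del H|_{p_2,k_2}$ term on the third line of the statement via~\eqref{eq6-06-oct-2025} and Proposition~\ref{prop1-06-oct-2025}. The remaining technical content is the term-by-term verification that every surviving $|\del H|$ factor carries either an explicit $\tfrac{\la r-t\ra}{r}$ weight, a $|\delsN H|$ flavour, or is paired with $|\Phi|_{\vec\gamma}$; once this bookkeeping is in place, the induction closes.
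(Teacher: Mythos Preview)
Your proposal is correct and follows essentially the same approach as the paper's proof: induction on $\ord(I)$, splitting $[\mathscr{Z}^{I}Z,\opDirac]=\mathscr{Z}^{I}([Z,\opDirac]\Psi)+[\mathscr{Z}^{I},\opDirac](Z\Psi)$, applying Lemma~\ref{lem1-13-oct-2025} with~\eqref{eq2-13-oct-2025} for the $Z=\del_{\delta}$ branch and the $|H|_{p_2+1,k_2+1}$ bound~\eqref{eq11-18-july-2025} for the $Z=L_a,\Omega_{ab}$ branch, then closing via the induction hypothesis together with~\eqref{eq3-09-oct-2025(l)},~\eqref{eq5-19-july-2025},~\eqref{eq3-19-july-2025}. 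The paper treats the boost/rotation first summand slightly more economically by invoking~\eqref{eq11-18-july-2025} directly rather than re-expanding via Proposition~\ref{prop1-16-july-2025}, but the content is the same.
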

\begin{proof}
This is based on an induction on $\ord(I)$. Assume that for $\ord(I)\leq p$ the desired estimate holds. We consider $\mathscr{Z}^{I'} = \mathscr{Z}^IZ$. 

\noindent{$\bullet$} When $Z = \del_{\delta}$, $p':=\ord(I') = p+1$, $k':=\rank(I') = k$.
\begin{equation}\label{eq1-09-oct-2025}
[\mathscr{Z}^I\widehat{\del_{\delta}},\opDirac]\Psi 
= \mathscr{Z}^I([\widehat{\del_{\delta}},\opDirac]\Psi) 
+ [\mathscr{Z}^I,\opDirac](\widehat{\del_{\delta}}\Psi)
=: T_1+T_2.
\end{equation}

For the term $T_1$, we apply Lemma~\ref{lem1-13-oct-2025} together with \eqref{eq2-13-oct-2025}:
$$
\aligned
& \zetab\big|\la\Phi,\mathscr{Z}^I([\del_{\delta},\opDirac]\Psi)\ra_{\ourD}\big|
\\
& \lesssim_p |\Phi|_{\vec{n}}\sum_{p_1+p_2\leq p\atop k_1+k_2\leq k}\big([\del\Psi]_{p_1,k_1} + [\Psi]_{p_1,k_1}\big)\Big(\frac{\la r-t\ra}{r}|\del H|_{p_2,k_2} + |\delsN H|_{p_2,k_2}\Big)
\\
& \quad+ \zetab|\Phi|_{\vec{\gamma}}\sum_{p_1+p_2\leq p\atop k_2+k_2\leq k}
|\del H|_{p_1,k_1}[\Psi]_{\vec{\gamma},p_2,k_2}
+|\Phi|_{\vec{n}}\sum_{p_1+p_2\leq p\atop k_1+k_2\leq k}[\Psihat]_{p_1,k_1}|\del H|_{p_2,k_2}
\\
&
\quad+ |\Phi|_{\vec{n}}\big(\Hcom_p[\Psi] + \Hcom_p[\del\Psi]\big)
\\
& = |\Phi|_{\vec{n}}\sum_{p_1'+p_2\leq p'\atop k_1+k_2\leq k}\big([\del\Psi]_{p_1'-1,k_1} + [\Psi]_{p_1'-1,k_1}\big)\Big(\frac{\la r-t\ra}{r}|\del H|_{p_2,k_2} + |\delsN H|_{p_2,k_2}\Big)
\\
& \quad+ \zetab|\Phi|_{\vec{\gamma}}\sum_{p_1'+p_2\leq p'\atop k_2+k_2\leq k}
[\Psi]_{\vec{\gamma},p_1'-1,k_1}|\del H|_{p_2,k_2}
+t^{-1}|\Phi|_{\vec{n}}\sum_{p_1'+p_2\leq p'\atop k_1+k_2\leq k}[\Psi]_{p_1',k_1+1}|\del H|_{p_2,k_2}
\\
& \quad
+ |\Phi|_{\vec{n}}\big(\Hcom_{p'-1}[\Psi] + \Hcom_{p'-1}[\del\Psi]\big).
\endaligned
$$
This fits the induction. For the term $T_2$ introduced in \eqref{eq1-09-oct-2025}, we only need to apply the assumption of induction.

\noindent{$\bullet$} When $Z = L_a,\Omega_{ab}$, $p'=p+1,k'=k+1$. We apply the argument of \eqref{eq13-18-july-2025}. 
$$
[\mathscr{Z}^I\widehat{Z},\opDirac]\Psi = \mathscr{Z}^I([\widehat{Z},\opDirac]\Psi) + [\mathscr{Z}^I,\opDirac](\widehat{Z}\Psi) = :T_3+T_4.
$$
For the term $T_3$, we remark that
$$
\aligned
\zetab\big|\la\Phi,\mathscr{Z}^IT_3\ra_{\ourD}\big|
& = \zetab\big|\la \Phi,\mathscr{Z}^I([\widehat{Z},\opDirac]\Psi)\ra_{\ourD}\big|
\lesssim\zetab\big|[\widehat{Z},\opDirac]\Psi\big|_{p,k} 
\\
& \lesssim_p   
\sum_{p_1+p_2\leq p\atop k_1+k_2\leq k}[\del\Psi]_{p_1,k_1}|H|_{p_2+1,k_2+1}
+ \Hcom_{p}[\Psi] + \Hwave_{p}[\Psi]
\\
& = \sum_{p_1'+p_2\leq p'\atop k_1'+k_2\leq k'}[\del\Psi]_{p_1'-1,k_1'-1}|H|_{p_2+1,k_2+1}
+ \Hcom_{p-1}[\Psi] + \Hwave_{p-1}[\Psi],
\endaligned
$$
where we have applied \eqref{eq11-18-july-2025}. This fixes the induction. For the term $T_4$ we apply directly the assumption of induction and the relation \eqref{eq3-09-oct-2025(l)}, \eqref{eq5-19-july-2025} and \eqref{eq3-19-july-2025}. Here for the term $\Phihat$, we need to treat 
$$
t^{-1}\widehat{L_a}\widehat{Z}\Psi.
$$ 
We only need to remark that it is a linear combination of $\widehat{\del_{\alpha}}\Phi$ with homogeneous coefficients of degree zero (in $\{3t/4\leq r\leq 4t/3\}$). Thus $t^{-1}\widehat{L_a}\widehat{Z}\Psi$ can still estimated by \eqref{eq3-09-oct-2025(l)}.
\end{proof}
\begin{remark}
In the above estimate, the term
$
[\Psihat]_{p_1-1,k_1}|\del H|_{p_2,k_2}
$ 
was bounded by
\begin{equation}\label{eq1-13-oct-2025}
[\Psihat]_{p_1-1,k_1}|\del H|_{p_2,k_2}\lesssim t^{-1}[L\Psi]_{p_1,k_1+1}|\del H|_{p_2,k_2}.
\end{equation} 
This leads us to a satisfactory decay. One may worry about the $(k_1+1)$ index in the right-hand side of \eqref{eq1-13-oct-2025}, however, due to the restriction $k_1+1\leq p_1\leq N$ in the bootstrap argument, the term $[\Psi]_{p_1,k_1+1}$ can always be bounded by $[\Psi]_N$. However, this term causes some difficulty when we consider the commutator associated with $\widehat{\del_{\delta}}\Psi$. That is precise reason why we need particular estimates on the spinor fields $\Psihat_a$.
\end{remark}

%==================================================================================

\section{Pointwise estimate on massive spinor fields}
\label{section=N10}

\subsection{ Analysis based on the massive Dirac operator}

{ 

\paragraph{First result of this section.}

We establish pointwise estimates for the massive Dirac equation on a curved spacetime, exploiting that the mass parameter $M>0$ is strictly positive. The estimates below imply the schematic decay
\be
|\Psi|_{\vec{n}}\;\lesssim_N\; (s/t)^N \qquad \text{for all integers } N\ge1,
\ee
which captures the physical fact that a massive fermion cannot reach the light cone: it cannot be accelerated to the speed of light, and its propagation remains strictly timelike. As in preceding sections, we assume natural bounds on the metric perturbation $H$ and on the wave--coordinate vector field $W$. Recall that $\zeta^2 := 1-|\delb_r T|^2$ is the (hyperboloidal) energy coefficient and that $\zetab := \sqrt{\zeta^2+|H^{\Ncal 00}|}$ (see \eqref{eq9-08-aout-2025}). 

\begin{proposition}[Pointwise estimates for massive Dirac fields. I]
\label{prop1-24-july-2025}
Assume that \eqref{eq-USA-condition} holds together with
\begin{equation}\label{eq8-22-july-2025}
|H|_{p+1}\lesssim \eps_s\,\zeta,
\qquad
|W|_{p+1}\lesssim \eps_s,
\end{equation}
for a sufficiently small $\eps_s>0$. Then every sufficiently regular solution $\Psi$ to the massive Dirac equation
\begin{equation}\label{eq2-14-aout-2025}
\opDirac\Psi + \mathrm{i}M\Psi = 0
\end{equation}
satisfies the pointwise bound
\begin{equation}
[\Psi]_{p,k}\;\lesssim_{M,p}\; \zetab \,[\Psi]_{p+1,k+1}
\qquad \text{in } \Mcal_{[s_0,s_1]}.
\end{equation}
\end{proposition}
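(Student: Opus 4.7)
The plan is an induction on the order $p$ (with rank $k\le p$), with the heart of the argument being a base case $|\Phi|_{\vec n}\lesssim_M \zetab[\Phi]_{1,1}+M^{-1}|F|_{\vec n}$ for any spinor $\Phi$ solving the inhomogeneous equation $\opDirac\Phi+\mathrm{i}M\Phi=F$. Once this base case is in hand, for any admissible multi-index $I$ of type $(p,k)$ the commuted spinor $\mathscr{Z}^I\Psi$ satisfies
\begin{equation*}
\opDirac(\mathscr{Z}^I\Psi)+\mathrm{i}M\,\mathscr{Z}^I\Psi \;=\; -[\mathscr{Z}^I,\opDirac]\Psi,
\end{equation*}
so the base case applied to $\Phi=\mathscr{Z}^I\Psi$ yields $|\mathscr{Z}^I\Psi|_{\vec n}\lesssim_M \zetab[\Psi]_{p+1,k+1}+M^{-1}|[\mathscr{Z}^I,\opDirac]\Psi|_{\vec n}$, which I will close via the high-order commutator estimate of Proposition~\ref{prop1-19-july-2025}.

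For the base case, the plan is to work in a $g$-orthonormal frame $\{e_0=\vec n,e_1,e_2,e_3\}$ adapted to the slice $\Mcal_s$, in which the Dirac operator splits cleanly as $\opDirac\Phi=-\vec n\cdot\nabla_{\vec n}\Phi+e_a\cdot\nabla_{e_a}\Phi$ (using $\vec n\cdot\vec n=1$ and $e_a\cdot e_a=-1$ in our conventions). Since $|\vec n|_{\vec n}=|e_a|_{\vec n}=1$, Lemma~\ref{lem1-16-july-2025} gives at once
\begin{equation*}
M|\Phi|_{\vec n} \;\le\; |\nabla_{\vec n}\Phi|_{\vec n}+\sum_a|\nabla_{e_a}\Phi|_{\vec n}+M^{-1}|F|_{\vec n}.
\end{equation*}
The $\zetab$-gain comes from expanding each frame derivative in the admissible vector fields: in the hyperboloidal region the identity $\vec n=(t/s)\del_t+(x^a/s)\del_a$ (modulo $|H|$-corrections from Claim~\ref{cor1-16-june-2025}) and $\del_a=t^{-1}L_a-(x^a/t)\del_t$ combine to produce $\nabla_{\vec n}\Phi=(s/t)\nabla_{\del_t}\Phi+(x^a/(st))\nabla_{L_a}\Phi$, with favorable coefficients $(s/t)=\zeta$ and $r/(st)\le s^{-1}\lesssim\zeta$; similarly, $e_a=B_a^b\delb_b$ with $|B|\lesssim\zetab^{-1}$ (from Claim~\ref{cla1-16-june-2025}) and $\delb_b=t^{-1}L_b$, yielding $|\nabla_{e_a}\Phi|_{\vec n}\lesssim\zetab^{-1}t^{-1}[\Phi]_{1,1}\lesssim s^{-1}[\Phi]_{1,1}\lesssim\zeta[\Phi]_{1,1}$ since $t\lesssim s^2$ in $\Mcal^{\Hcal}_{[s_0,s_1]}$. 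In the merging-Euclidean region $\zetab\sim 1$ and the bound is immediate from the equation. The Clifford corrections distinguishing $\nabla_{\del_\mu}$ from the adapted derivatives $\widehat{\del_\mu}$ are estimated by \eqref{eq1-31-july-2025} and carry extra factors of $|\del H|\lesssim\eps_s$, which are absorbed.

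For the inductive step, Proposition~\ref{prop1-19-july-2025} provides
\begin{equation*}
\zetab\,|[\mathscr{Z}^I,\opDirac]\Psi|_{\vec n}\lesssim_{p}\sum[\del\Psi]_{p_1-1,k_1-1}|H|_{p_2+1,k_2+1}+\sum[\del\Psi]_{p_1-1,k_1}|\del H|_{p_2,k_2}+\Hcom_{p-1}[\Psi]+\Hwave_{p-1}[\Psi].
\end{equation*}
Under the smallness hypotheses $|H|_{p+1},|W|_{p+1}\lesssim\eps_s\zeta\lesssim\eps_s\zetab$, every term on the right carries at least one factor of $\zeta$ coming from the metric and gauge-vector smallness, so the commutator contribution is bounded by $\eps_s M\zetab[\Psi]_{p+1,k+1}$ plus lower-order-in-$p$ pieces handled by induction. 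Taking the maximum over $I$ of type $(p,k)$ and absorbing the $\eps_s$-small term into the left-hand side then yields the claim. The main obstacle is the base case: the naive application of Lemma~\ref{lem2-13-july-2025}, which gives $|\del_\mu|_{\vec n}\lesssim\zetab^{-1}$, would lead to the wrong bound $|\Phi|_{\vec n}\lesssim\zetab^{-1}[\del\Phi]_{0,0}/M$. The $\zetab$-gain is obtained only by working in the intrinsic frame $\{\vec n,e_a\}$, where $\vec n$ and $e_a$ both have unit $\vec n$-norm, and by recognizing that the expansion of these frame vectors in $\{\del_\alpha\}$ produces the favorable weights $(s/t)$ and $s^{-1}$ that encode, geometrically, the fact that a massive Dirac field propagates strictly inside the light cone.
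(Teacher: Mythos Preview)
Your overall strategy matches the paper's: derive a base-case estimate $|\Phi|_{\vec n}\lesssim_M\zetab[\Phi]_{1,1}+|F|_{\vec n}$ for solutions of $\opDirac\Phi+\mathrm{i}M\Phi=F$, then close by induction using the high-order commutator bound of Proposition~\ref{prop1-19-july-2025}. For the base case you use the orthonormal frame $\{\vec n,e_a\}$ adapted to the slice, whereas the paper (Lemma~\ref{lem2-01-auot-2025}) writes $\opDirac\Psi$ in the semi-hyperboloidal frame as $(\gu^{00}\del_t+\gu^{a0}\delu_a)\cdot\nabla_t\Psi+t^{-1}(\cdots)\cdot\nabla_{L_a}\Psi$ and proves in Lemma~\ref{lem2-01-aout-2025} that the first Clifford vector equals $-J\lapsb^{-1}\vec n$ plus $O(\zeta^2)$ corrections, hence has $\vec n$-norm $\lesssim\zetab$. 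Your decomposition is the natural dual of this identity (you place the small weight on the derivative direction instead of the Clifford factor) and is correct in the hyperboloidal region.

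There is, however, a genuine gap in the merging region $\Mcal^{\Mcal}_{[s_0,s_1]}$. Your claim that ``$\zetab\sim1$ in the merging-Euclidean region'' is \emph{false} there: near the inner boundary $r=r^{\Hcal}(s)$ one has $\zeta\sim 2/s\sim t^{-1/2}$, and your hyperboloidal identity $\vec n=(s/t)\del_t+(x^a/(st))L_a$ breaks down. The coefficient on $\nabla_t$ in $\nabla_{\vec n}$ becomes $\zeta^{-1}\bigl(1-(r/t)\del_rT\bigr)$, and to bound this by $\zeta$ you must show $|r/t-\del_rT|\lesssim\zeta^2$; likewise $\delb_b$ is no longer $t^{-1}L_b$ but differs by $(x^b/r)(\del_rT-r/t)\del_t$, which needs the same control. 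These error estimates are exactly what the proof of Lemma~\ref{lem2-01-aout-2025} supplies, and the paper's choice of the semi-hyperboloidal frame handles all of $\{r\le 3t\}$ uniformly precisely because $\delu_\alpha$ is globally defined there. You should either supply these transition-region bounds explicitly or switch to the paper's frame. (A minor point: the hypothesis reads $|W|_{p+1}\lesssim\eps_s$, not $\eps_s\zeta$; the $\Hwave$ contribution from the commutator is still absorbable, but you should not claim it carries a factor of $\zeta$.)
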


The proof relies on the structure of the massive Dirac operator and on the commutator bounds developed earlier. 

%---------------------------------------------

\paragraph{Technical estimates.}

We begin with two preliminary lemmas.

\begin{lemma}\label{lem2-01-aout-2025}
When \eqref{eq-USA-condition} holds for a sufficiently small $\eps_s$, one has 
\begin{equation}\label{eq7-01-aout-2025}
\big|\gu^{00}\del_t + \gu^{a0}\delu_a\big|_{\vec{n}}\lesssim \zetab \qquad \text{ in } \Mcal_{[s_0,s_1]}\cap\{r\leq 3t\}.
\end{equation}
\end{lemma}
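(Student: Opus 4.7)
The plan is to interpret the vector $V := \gu^{00}\del_t + \gu^{a0}\delu_a$ as the metric dual of the co-vector $\thetau^0 = \diff t - (x^a/t)\diff x^a$. This identification follows directly from the definition $\gu^{\alpha\beta} = g(\thetau^\alpha,\thetau^\beta)$, so that $V = g^{\alpha\beta}(\thetau^0)_\beta\,\del_\alpha$.

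Next I would decompose $\thetau^0$ in the adapted co-frame. Using the Jacobian \eqref{eq6-10-april-2025} one has $\diff t = J\,\diff s + \delb_rT\,(x^a/r)\diff x^a$, which yields
\begin{equation*}
\thetau^0 = J\,\diff s + \frac{x^a}{r}\big(\delb_rT - r/t\big)\diff x^a.
\end{equation*}
Raising indices and using $(\diff s)^\sharp = -\lapsb^{-1}\vec{n}$ (from Lemma \ref{lem1-12-june-2025}) together with $(\diff x^a)^\sharp = \del_a + H^{0a}\del_t + H^{ab}\del_b$, one obtains
\begin{equation*}
V = -J\lapsb^{-1}\vec{n} + \big(\delb_rT - r/t\big)\Big(\delb_r - \delb_rT\,\del_t + \frac{x^a}{r}H^{0a}\del_t + \frac{x^a}{r}H^{ab}\del_b\Big),
\end{equation*}
after using $\delb_r = \delb_rT\,\del_t + \del_r$ to recombine the radial contribution.

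The main term has $|\cdot|_{\vec{n}}$-norm $|-J\lapsb^{-1}\vec{n}|_{\vec{n}} = J\lapsb^{-1} \lesssim \zetab$, by Claim \ref{cor1-16-june-2025} which gives $\lapsb \sim J\zetab^{-1}$. In the hyperboloidal region $\Mcal^{\Hcal}_{[s_0,s_1]}$ the factor $\delb_rT - r/t$ vanishes identically, so the remainder term is zero and the estimate is immediate; since $r\leq t-1$ automatically in $\Mcal^{\Hcal}_{[s_0,s_1]}$, this already covers $\Mcal^{\Hcal}_{[s_0,s_1]}\cap\{r\leq 3t\}$.

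In the complementary region $\Mcal^{\ME}_{[s_0,s_1]}\cap\{r\leq 3t\}$, I would estimate the remainder term by term using $|\delb_r|_{\vec{n}}\lesssim \zetab$ from \eqref{eq7-08-aout-2025}, the transversal bounds $|\del_t|_{\vec{n}},\,|\del_b|_{\vec{n}}\lesssim \zetab^{-1}$ from Proposition \ref{prop1-09-aout-2025} and Lemma \ref{lem2-13-july-2025}, and the smallness $|H|\lesssim \eps_s\zeta$ from \eqref{eq-USA-condition}. The hard part will be converting these individual bounds into an overall $\lesssim \zetab$ estimate: I plan to exploit the elementary identity $(r/t - \delb_rT)(r/t + \delb_rT) = (r/t)^2 - 1 + \zeta^2$ with a case analysis distinguishing whether $r/t\leq 1$ (where the lower bound $r/t + \delb_rT\gtrsim 1$ holds throughout $\Mcal^{\ME}_{[s_0,s_1]}$ since $r\geq r^{\Hcal}(s)$, yielding $|\delb_rT - r/t|\lesssim \zeta^2\lesssim\zetab^2$) or $r/t>1$ (where the restriction $r\leq 3t$ keeps $(r/t)^2-1$ bounded and the light-bending condition $H^{\Ncal00}<0$ gives a lower bound $\zetab^2 \gtrsim |H^{\Ncal 00}|$). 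This case-by-case conversion from $|\delb_rT - r/t|$ into a power of $\zetab$ is the technical heart of the proof.
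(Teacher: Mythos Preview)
Your proposal is correct and follows essentially the same route as the paper: both decompose $V = -J\lapsb^{-1}\vec{n} + (\text{error terms proportional to } (\delb_rT - r/t))$ and bound the main term by $J\lapsb^{-1}\lesssim\zetab$ via Claim~\ref{cor1-16-june-2025}. Your identification $V = (\thetau^0)^\sharp$ is a clean conceptual shortcut; the paper arrives at the same decomposition \eqref{eq6-01-aout-2025} by expressing $\gu^{00},\gu^{a0}$ in terms of $\gb^{00},\gb^{a0}$ plus remainders, which is the same computation done componentwise.

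Where you overcomplicate matters is the ``technical heart''. The paper handles $|\delb_rT - r/t|$ by the triangle inequality
\[
\Big|\frac{r}{t}-\delb_rT\Big| \leq \Big|\frac{r-t}{t}\Big| + \Big|\frac{\zeta^2}{1+\delb_rT}\Big| \lesssim \zeta^2,
\]
where $|(r-t)/t|\lesssim\zeta^2$ is checked directly: in $\Mcal^{\Ecal}_{[s_0,s_1]}\cap\{r\leq 3t\}$ one has $\zeta=1$ so the bound is trivial, and in $\Mcal^{\Mcal}_{[s_0,s_1]}$ one has $|t-r|\lesssim 1$ while $\zeta^2\gtrsim s^2/(s^2+r^2)\gtrsim 1/t$. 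Since each error vector in your remainder has $|\cdot|_{\vec{n}}$-norm $\lesssim\zetab^{-1}$, the bound $\zeta^2\zetab^{-1}\leq\zetab$ closes the argument. Your factorization and case split $r/t\lessgtr 1$ would also work, but the light-bending condition is not needed: the case $r/t>1$ occurs only in $\Mcal^{\Ecal}$, where $\zeta=1$ already.
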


\begin{proof} 
\bse
We begin with the following relations valid in $\Mcal_{[s_0,s_1]}\cap\{r\leq 3t\}$:
\begin{equation}
\aligned
\gu^{00}  & =  J^2\gb^{00} + \frac{2x^a}{r}g^{a0}\Big(\del_rT- \frac{r}{t}\Big) 
+ \frac{x^ax^b}{r^2}g^{ab}\Big(\frac{r^2}{t^2}-|\del_rT|^2\Big) := J^2\gb^{00} + R^0,
\\
\gu^{a0}  & =  J\gb^{a0} + \frac{x^b}{r}g^{ab}\Big(\del_rT - \frac{r}{t}\Big) =: J\gb^{a0} + R^a.
\endaligned
\end{equation}
and we recall that $\delu_a = \delb_a + \frac{x^a}{r}\Big(\frac{r}{t}- \del_rT\Big)$. 
By linear combination, we find 
\be
\gu^{00}\del_t + \gu^{a0}\delu_a = J\gb^{00}\delb_s  + J\gb^{0a}\delb_a
+ \gu^{0a}\frac{x^a}{r}\Big(\frac{r}{t} - \del_rT\Big)\del_t + R^0\del_t + R^a\delb_a.
\ee
By recalling Lemma~\ref{lem1-12-june-2025} and $\betab^a = \lapsb^2\gb^{0a} = \gb_{0a}\sigmab^{ab}$.
it follows that 
\be
\gb^{00}\delb_s + \gb^{0a}\delb_a = - \lapsb^{-2}\delb_0 + \gb^{0a}\delb_a
=- \lapsb^{-2}\big(\del_0 - \betab^a\delb_a\big) = \lapsb^{-1}\vec{n}, 
\ee
therefore
\begin{equation}\label{eq6-01-aout-2025}
\gu^{00}\del_t + \gu^{a0}\delu_a = -J\lapsb^{-1}\vec{n} + \gu^{0a}\frac{x^a}{r}\Big(\frac{r}{t} - \del_rT\Big)\del_t + R^0\del_t + R^a\delb_a.
\end{equation}
\ese

\bse
It remains to bound the last three terms. We observe first that 
\be
\Big|\frac{r}{t}- \del_rT\Big| \leq \Big|\frac{r-t}{t}\big| + \big|\frac{\zeta^2}{1+\del_rT}\big|
\lesssim \zeta^2 + \frac{|r-t|}{t}\lesssim \zeta^2, 
\ee
where we have used 
\begin{equation}
\frac{|r-t|}{t}\lesssim \zeta^2 \quad\text{ in }  \Mcal^{\EM}_{[s_0,s_1]}\cap\{r\leq 2t\}.
\end{equation} 
This inequality holds, since 
\be
\frac{|r-t|}{t}\lesssim 1 = \zeta^2 \qquad \text{ in } \Mcal^{\Ecal}_{[s_0,s_1]}\cap\{r\leq 3t\}
\ee
and 
\be
\frac{|r-t|}{t}\lesssim t^{-1}\lesssim r^{-1} \lesssim \frac{s^2}{s^2+r^2}\leq \zeta^2
\qquad \text{ in } \Mcal^{\M}_{[s_0,s_1]}. 
\ee
Each of the last three terms in the right-hand side of \eqref{eq6-01-aout-2025} contains a factor $\big((r/t)- \del_rT\big)$. This completes the proof. 
\ese
\end{proof}

We establish first the estimate in Proposition~\ref{prop1-24-july-2025} at the zero-order, as follows. 

\begin{lemma}
\label{lem2-01-auot-2025}
Assume \eqref{eq-USA-condition} holds together with 
\begin{equation}\label{eq6-30-july-2025}
|\del H|\lesssim \zeta\eps_s
\end{equation}
for a sufficiently small $\eps_s$. Then for any sufficiently regular solution $\Psi$  to the massive Dirac equation 
\begin{equation}\label{eq3-29-july-2025}
\opDirac \Psi + \mathrm{i}M\Psi = \Phi
\end{equation}
in $\Mcal_{[s_0,s_1]}\cap\{r\leq 3t\}$, one has 
\begin{equation}
|\Psi|_{\vec{n}}\lesssim_{M}\zetab|\widehat{\del_t}\Psi|_{\vec{n}} + t^{-1}\zetab^{-1}[\Psi]_{1,1} + |\Phi|_{\vec{n}}\lesssim \zetab^{-1}[\Psi]_{1,1}.
\end{equation}
\end{lemma}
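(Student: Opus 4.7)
The strategy exploits the strict positivity of the mass $M>0$: the Dirac equation \eqref{eq3-29-july-2025} gives, via $\mathrm{i}M\Psi=\Phi-\opDirac\Psi$ and the fact that Clifford multiplication by $\mathrm{i}M$ preserves the $\vec n$-norm,
\[
M\,|\Psi|_{\vec n}\;\leq\;|\Phi|_{\vec n}\;+\;|\opDirac\Psi|_{\vec n}.
\]
It therefore suffices to bound $|\opDirac\Psi|_{\vec n}$ by the right-hand side of the conclusion plus a multiple of $|\Psi|_{\vec n}$ small relative to $M$, which can be absorbed on the left.

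To estimate $\opDirac\Psi=g^{\mu\nu}\del_\mu\cdot\nabla_\nu\Psi$, I re-express it in the semi-hyperboloidal frame $\{\delu_\alpha\}$ (valid in $\{r\leq 3t\}$) as $\gu^{\alpha\beta}\delu_\alpha\cdot\nabla_{\delu_\beta}\Psi$ and split according to the value of $\beta$:
\[
\opDirac\Psi\;=\;\bigl(\gu^{00}\del_t+\gu^{a0}\delu_a\bigr)\cdot\nabla_{\del_t}\Psi\;+\;\bigl(\gu^{0b}\del_t+\gu^{cb}\delu_c\bigr)\cdot\nabla_{\delu_b}\Psi\;=:\;T_0+T_b.
\]
For $T_0$, the key input is Lemma~\ref{lem2-01-aout-2025}, which yields the gain $|\gu^{00}\del_t+\gu^{a0}\delu_a|_{\vec n}\lesssim\zetab$ (reflecting the near-proportionality of this combination to $\vec n$). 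Expanding $\nabla_{\del_t}\Psi=\widehat{\del_t}\Psi+\tfrac14 g^{\mu\nu}\del_\mu\cdot\nabla_\nu\del_t\cdot\Psi$ and invoking \eqref{eq1-31-july-2025} together with Corollary~\ref{cor1-18-july-2025} gives
\[
|T_0|_{\vec n}\;\lesssim\;\zetab\,|\widehat{\del_t}\Psi|_{\vec n}\;+\;|\del H|\,|\Psi|_{\vec n}\;\lesssim\;\zetab\,|\widehat{\del_t}\Psi|_{\vec n}\;+\;\zeta\eps_s\,|\Psi|_{\vec n}.
\]

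For $T_b$, I extract a $t^{-1}$ factor from $\delu_b=t^{-1}L_b$ (in the hyperboloidal part; analogous tangential decomposition in the merging-Euclidean part), so that $\nabla_{\delu_b}\Psi=t^{-1}\widehat{L_b}\Psi+\tfrac14 t^{-1}g^{\mu\nu}\del_\mu\cdot\nabla_\nu L_b\cdot\Psi$. Combining $|\gu^{0b}\del_t+\gu^{cb}\delu_c|_{\vec n}\lesssim\zetab^{-1}$ (from Proposition~\ref{prop1-09-aout-2025} and Lemma~\ref{lem1-23-july-2025}) with $|\widehat{L_b}\Psi|_{\vec n}\leq[\Psi]_{1,1}$ produces the principal term $t^{-1}\zetab^{-1}[\Psi]_{1,1}$, while the lower-order correction is bounded via \eqref{eq2-31-july-2025} by $t^{-1}\zetab^{-2}\bigl(1+(t+r)|\del H|\bigr)|\Psi|_{\vec n}$; in the region $r\leq 3t$ the assumption $|\del H|\lesssim\zeta\eps_s$ turns this into a cubic remainder of the form $t^{-1}\zetab^{-2}|\Psi|_{\vec n}+\eps_s\zeta\zetab^{-2}|\Psi|_{\vec n}$. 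Assembling $T_0$ and $T_b$ and invoking $M>0$ together with $\eps_s$ small (and $t\geq s_0$) to absorb the error into $M|\Psi|_{\vec n}$, the first claimed inequality follows; the second $\lesssim\zetab^{-1}[\Psi]_{1,1}$ is then immediate from $\zetab\lesssim 1$, $t\geq 1$, $|\widehat{\del_t}\Psi|_{\vec n}\leq[\Psi]_{1,1}$, and the implicit setting $\Phi=0$ in the application of this bound (the lemma is invoked in the homogeneous case \eqref{eq2-14-aout-2025} for Proposition~\ref{prop1-24-july-2025}).

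\textbf{Main obstacle.} The delicate point is closing the absorption: the apparent remainder $t^{-1}\zetab^{-2}|\Psi|_{\vec n}$ is not uniformly small as $\zetab\to 0$, so a naive bound of the correction $g^{\mu\nu}\del_\mu\cdot\nabla_\nu L_b\cdot\Psi$ loses one factor of $\zetab^{-1}$ too many. A finer book-keeping, exploiting the Clifford anti-commutation relations $\del_\mu\cdot\del_\nu+\del_\nu\cdot\del_\mu=-2g_{\mu\nu}$ inside the symmetric part of $g^{\mu\nu}\Gamma^\gamma_{\nu b}\del_\mu\cdot\del_\gamma$, together with the near-proportionality of the combination $\gu^{0b}\del_t+\gu^{cb}\delu_c$ to a tangential direction modulo $O(H)$, must be used to trade one $\zetab^{-1}$ against either an $\eps_s\zeta$ or an extra $\zetab$; this is the analogue for the massive Dirac operator of the quasi-null cancellations already familiar from the Einstein--Klein--Gordon setting.
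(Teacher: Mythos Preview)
Your plan matches the paper's proof almost exactly: same semi-hyperboloidal decomposition of $\opDirac\Psi$, same use of Lemma~\ref{lem2-01-aout-2025} to extract the $\zetab$ gain on $T_0$, and same $t^{-1}$ extraction on $T_b$ via $\delu_b=t^{-1}L_b$. You have also correctly located the only genuine difficulty, namely the correction term $t^{-1}(\gu^{0b}\del_t+\gu^{cb}\delu_c)\cdot g^{\mu\nu}\del_\mu\cdot\nabla_\nu L_b\cdot\Psi$.

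Where your argument is incomplete is in the resolution of this obstacle. Bounding $|\gu^{0b}\del_t+\gu^{cb}\delu_c|_{\vec n}\lesssim\zetab^{-1}$ and then invoking \eqref{eq2-31-july-2025} separately does produce $t^{-1}\zetab^{-2}|\Psi|_{\vec n}$, and since $t^{-1}\lesssim\zetab^2$ this is only $O(1)\cdot|\Psi|_{\vec n}$, which cannot be absorbed into $M|\Psi|_{\vec n}$. Your proposed remedy (anti-commutation inside the symmetric part of $g^{\mu\nu}\Gamma^\gamma_{\nu b}$, tangentiality of the prefactor) points in the right direction but is not the mechanism the paper uses. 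The paper instead expands both factors explicitly: $\gu^{0b}\del_t+\gu^{cb}\delu_c=\del_b+O(H)$ and $\nabla_\nu L_b=\delta_\nu^0\del_b+\delta_\nu^b\del_t+t\nabla_\nu\del_b+x^b\nabla_\nu\del_t$, so the leading contribution is a Clifford product of several natural-frame vectors $\del_{\alpha_1}\cdot\del_{\alpha_2}\cdot\del_{\alpha_3}\cdot\Psi$ with bounded scalar coefficients. The crucial input is then Lemma~\ref{lem1'-18-july-2025}, which shows that \emph{any} such multi-Clifford product costs only a \emph{single} $\zetab^{-1}$, not one per factor. This yields
\[
\big|t^{-1}(\gu^{0b}\del_t+\gu^{cb}\delu_c)\cdot g^{\mu\nu}\del_\mu\cdot\nabla_\nu L_b\cdot\Psi\big|_{\vec n}
\lesssim t^{-1}\zetab^{-1}\bigl(1+t|\del H|\bigr)|\Psi|_{\vec n},
\]
whose first piece $t^{-1}\zetab^{-1}|\Psi|_{\vec n}$ is dominated by $t^{-1}\zetab^{-1}[\Psi]_{1,1}$ already present on the right-hand side (no absorption needed), and whose second piece $\zetab^{-1}|\del H||\Psi|_{\vec n}\lesssim\eps_s|\Psi|_{\vec n}$ is absorbed via $M>0$. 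So the missing step is simply to invoke Lemma~\ref{lem1'-18-july-2025} after multiplying out, rather than attempting a more elaborate cancellation.
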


\begin{proof}
\bse
We write the Dirac operator in the semi-hyperboloidal frame, that is, 
\begin{equation}\label{eq2-22-july-2025}
\opDirac\Psi = \big(\gu^{00}\del_t + \gu^{a0}\delu_a\big)\cdot\nabla_t\Psi 
+ t^{-1}\big(\gu^{0a}\del_t + \gu^{ba}\delu_b\big)\cdot\nabla_{L_a}\Psi, 
\end{equation}
which leads us to
\begin{equation}\label{eq5-22-july-2025}
\aligned
\mathrm{i}M\Psi  & =  -(\gu^{00}\del_t + \gu^{a0}\delu_a)\cdot\widehat{\del_t}\Psi 
- t^{-1}(\gu^{0a}\del_t + \gu^{ba}\delu_b)\cdot\widehat{L_a}\Psi + \Phi
\\
& \quad - \frac{1}{4}(\gu^{00}\del_t + \gu^{a0}\delu_a)\cdot
g^{\mu\nu}\del_{\mu}\cdot\nabla_{\nu}\del_t\cdot\Psi 
- \frac{1}{4t}(\gu^{0a}\del_t + \gu^{ba}\delu_b)\cdot 
g^{\mu\nu}\del_{\mu}\cdot\nabla_{\nu}L_a\cdot\Psi.
\endaligned
\end{equation}
Thanks to \eqref{eq7-01-aout-2025}, we have 
\be
\big|(\gu^{00}\del_t + \gu^{a0}\delu_a)\cdot\widehat{\del_t}\Psi \big|_{\vec{n}}
\lesssim \zetab \, [\Psi]_{1,0}.
\ee
We also remark that
\be
\big|t^{-1}(\gu^{0a}\del_t + \gu^{ba}\delu_b)\cdot\widehat{L_a}\Psi\big|_{\vec{n}}
\lesssim\zetab^{-1}t^{-1}[\Psi]_{1,1}
\lesssim\zetab \, [\Psi]_{1,1}, 
\ee
due to the fact that $t^{-1}\lesssim \zetab^2$.  

For the fourth term of \eqref{eq5-22-july-2025}, we obtain 
\be
\big|(\gu^{00}\del_t + \gu^{a0}\delu_a)\cdot
g^{\mu\nu}\del_{\mu}\cdot\nabla_{\nu}\del_t\cdot\Psi \big|_{\vec{n}}
\lesssim \big|(\gu^{00}\del_t + \gu^{a0}\delu_a)\big|_{\vec{n}}|g^{\mu\nu}\del_{\mu}\cdot\nabla_{\nu}\del_t\cdot\Psi|_{\vec{n}}.
\ee
Then by \eqref{eq7-01-aout-2025} and \eqref{eq2-31-july-2025}, and \eqref{eq6-30-july-2025},
\begin{equation}\label{eq4-30-july-2025}
\aligned
\big|(\gu^{00}\del_t + \gu^{a0}\delu_a)\cdot
g^{\mu\nu}\del_{\mu}\cdot\nabla_{\nu}\del_t\cdot\Psi \big|_{\vec{n}}
& \lesssim |\del H||\Psi|_{\vec{n}}\lesssim \eps_s\zeta|\Psi|_{\vec{n}}. 
\endaligned
\end{equation}
The estimate on the last term of \eqref{eq5-22-july-2025} is more involved. We write
\be
\gu^{0a}\del_t + \gu^{ab}\delu_b = \del_a + \Hu^{a0}\del_t + \Hu^{ab}\delu_b 
= \big(\Hu^{a0} + (x^b/t)\Hu^{ab}\big)\del_t + \big(\delta_a^b+\Hu^{ab}\big)\del_b
\ee
and then find  
\be
\aligned
&t^{-1}(\gu^{0a}\del_t + \gu^{ba}\delu_b)\cdot 
g^{\mu\nu}\del_{\mu}\cdot\nabla_{\nu}L_a\cdot\Psi
\\
& = t^{-1}\big(\gu^{0a}\del_t+\gu^{ba}\delb_b\big)
\cdot(g^{\mu0}\del_{\mu}\cdot\del_a + g^{\mu a}\del_{\mu}\cdot\del_t)\cdot\Psi
\\
& \quad +\big(\gu^{0a}\del_t+\gu^{ba}\delb_b\big)
\cdot(g^{\mu\nu}\del_{\mu}\cdot\nabla_{\nu}\del_a + (x^a/t)g^{\mu\nu}\del_{\mu}\cdot\nabla_{\nu}\del_t)\cdot\Psi
\\
& = t^{-1}\Big(\big(\Hu^{a0} + (x^b/t)\Hu^{ab}\big)\del_t + \big(\delta_a^b+\Hu^{ab}\big)\del_b\Big)
\cdot(g^{\mu0}\del_{\mu}\cdot\del_a + g^{\mu a}\del_{\mu}\cdot\del_t)\cdot\Psi
\\
& \quad +\Big(\big(\Hu^{a0} + (x^b/t)\Hu^{ab}\big)\del_t + \big(\delta_a^b+\Hu^{ab}\big)\del_b\Big)
\cdot(g^{\mu\nu}\del_{\mu}\cdot\nabla_{\nu}\del_a 
+ (x^a/t)g^{\mu\nu}\del_{\mu}\cdot\nabla_{\nu}\del_t)\cdot\Psi. 
\endaligned
\ee
Observe that by \eqref{eq2-14-june-2025}
\begin{equation}\label{eq2-30-july-2025}
|\Hu^{a0}| + |\Hu^{ab}|\lesssim 1 \qquad \text{ in } \Mcal^{\ME}_{[s_0,s_1]}\cap\{r\leq 2t\}
\end{equation}
and also recall \eqref{eq3-30-july-2025}. Then by Lemma~\ref{lem1'-18-july-2025},
\begin{equation}\label{eq5-30-july-2025}
\aligned
\big|t^{-1}(\gu^{0a}\del_t + \gu^{ba}\delu_b)\cdot 
g^{\mu\nu}\del_{\mu}\cdot\nabla_{\nu}L_a\cdot\Psi\big|_{\vec{n}}
& \lesssim  t^{-1}\zeta^{-1}(1+t|\del H|)|\Psi|_{\vec{n}}
\\
& \lesssim  \zeta|\Psi|_{\vec{n}} + \zetab^{-1}|\del H||\Psi|_{\vec{n}}.
\endaligned
\end{equation}
Then we apply \eqref{eq6-30-july-2025} and, provided that $\eps_s$ is sufficiently small, we obtain the desired estimate.
\ese
\end{proof}

\begin{proof}[Proof of Proposition~\ref{prop1-24-july-2025}]
\bse
We differentiate the equation \eqref{eq2-14-aout-2025} with an admissible operator $\mathscr{Z}^I$ of type $(p,k)$:
\be
\opDirac\mathscr{Z}^I\Psi + \mathrm{i}M\mathscr{Z}^I\Psi = -[\mathscr{Z}^I,\opDirac]\Psi
\ee
and apply Lemma~\ref{lem2-01-auot-2025} together with Proposition~\ref{prop1-19-july-2025}. At this stage, we observe that \eqref{eq8-22-july-2025} covers \eqref{eq-vect-condition}. Then we have 
\be
\aligned
|\mathscr{Z}^I\Psi|_{\vec{n}} & \lesssim  
\zetab \, [\Psi]_{p+1,k+1} + \big|[\mathscr{Z}^I,\opDirac]\Psi\big|_{\vec{n}}
\\
& \lesssim  \zetab \, [\Psi]_{p+1,k+1} 
\sum_{p_1+p_2\leq p\atop k_1+k_2\leq k}[\del\Psi]_{p_1-1,k_1-1}|H|_{p_2+1,k_2+1} 
+\sum_{p_1+p_2\leq p\atop k_1+k_2\leq k}[\del\Psi]_{p_1-1,k_1}|\del H|_{p_2,k_2}
\\
& \quad + \Hcom_{p-1}[\Psi] 
+ \Hwave_{p-1}[\Psi]
\\
& \lesssim   \zetab \, [\Psi]_{p+1,k+1} + \eps_s[\Psi]_{p,k}, 
\endaligned
\ee
where for the last inequality we have applied the condition \eqref{eq8-22-july-2025}. Provided that $\eps_s$ is sufficiently small, we obtain the desired estimate.
\ese
\end{proof}

}

%----------------------------------------------------------------------------------------------------------------------

\subsection{ Analysis based on massive Klein-Gordon operator}
\label{section===12-2}

{ 

We now establish a second result, whose proof relies on the structure of the massive Klein-Gordon operator. We recall \eqref{eq6-29-july-2025} and write the wave operator in the semi-hyperboloidal frame \eqref{eq7-29-july-2025}: 
\begin{equation}
\Box_g\Psi 
= \gu^{\alpha\beta}\nabla_{\delu_{\alpha}}\big(\nabla_{\delu_{\beta}}\Psi\big) - \nabla_W\Psi
+ g^{\alpha\beta}\nabla_{\alpha}\big(\Psiu_{\beta}^{\nu}\big)\nabla_{\delu_{\nu}}\Psi.
\end{equation}
Based on this identity, we establish the following decomposition in the extended light cone domain. 

\begin{lemma}
\label{lem1-30-july-2025} 
Consider the domain $\Mcal_{[s_0,s_1]}\cap\{r\leq 3t\}$. Assume that \eqref{eq-USA-condition} holds and, in addition,
\begin{equation}
|H|_1\lesssim 1,\quad |\del H| + |\del\del H|\lesssim \zeta\eps_s.
\end{equation}
for a sufficiently small $\eps_s$.  Then, in $\Mcal_{[s_0,s_1]}\cap\{r\leq 3t\}$ one has 
\begin{equation}
\Box_g\Psi = \gu^{00}\widehat{\del_t}\big(\widehat{\del_t}\Psi\big) + R_2[H,\Psi] + R_1[H,\Psi] + R_0[H,\Psi] - \nabla_W\Psi, 
\end{equation}
in which the remainders satisfy   
\begin{equation}
\aligned
\big|R_2[H,\Psi]\big|_{\vec{n}} & \lesssim  t^{-1}[\del\Psi]_{1,1}
\\
\big|R_1[H,\Psi]\big|_{\vec{n}} & \lesssim 
t^{-1}\zetab^{-1}(1+t|\del H|)|\del \Psi|_{\vec{n}} + t^{-2}\zetab^{-1}(1+t|\del H|)[\Psi]_{1,1},
\\
\big|R_0[H,\Psi]\big|_{\vec{n}} & \lesssim  \eps_s|\Psi|_{\vec{n}}.
\endaligned
\end{equation}
\end{lemma}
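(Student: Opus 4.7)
The plan is to start from the identity displayed just before the statement,
\[
\Box_g\Psi
= \gu^{\alpha\beta}\nabla_{\delu_{\alpha}}(\nabla_{\delu_{\beta}}\Psi) - \nabla_W\Psi + g^{\alpha\beta}\nabla_{\alpha}(\Psiu_{\beta}^{\nu})\nabla_{\delu_{\nu}}\Psi,
\]
keep $-\nabla_W\Psi$ as it appears on the right-hand side of the claim, and sort every other summand into $R_0$, $R_1$, or $R_2$ according to two bookkeeping rules: (i) each spatial semi-hyperboloidal direction $\delu_a=t^{-1}L_a$ in $\{r\le 3t\}$ provides a free $t^{-1}$ gain that eventually reads as $\widehat{L_a}\Psi/t$ and populates $R_2$; (ii) each passage from a Levi--Civita derivative $\nabla_\alpha$ to a modified derivative $\widehat{\del_\alpha}$ via \eqref{equa-511m} produces Clifford-correction terms of the form $g^{\mu\nu}\del_\mu\cdot\nabla_\nu\del_\alpha\cdot(\,\cdot\,)$, controlled in pointwise norm by Lemmas~\ref{lem1'-18-july-2025} and \ref{lem1-01-aout-2025}, and these terms are packed into $R_1$ (one Christoffel factor, hence the $\zetab^{-1}|\del H|$ weight) or into $R_0$ (two Christoffel factors, hence quadratic in $|\del H|$ and absorbed by the smallness hypothesis).

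First I would split $\gu^{\alpha\beta}\nabla_{\delu_\alpha}(\nabla_{\delu_\beta}\Psi) = \gu^{00}\nabla_t\nabla_t\Psi + 2\gu^{0a}\nabla_t\nabla_{\delu_a}\Psi + \gu^{ab}\nabla_{\delu_a}\nabla_{\delu_b}\Psi$. For the mixed and spatial blocks, $\delu_a=t^{-1}L_a$ extracts one or two factors of $t^{-1}$; rewriting $\nabla_{L_a}\Psi=\widehat{L_a}\Psi + \tfrac14 g^{\mu\nu}\del_\mu\cdot\nabla_\nu L_a\cdot\Psi$ (and similarly for $\nabla_t$), then commuting the outer derivative inside via Lemma~\ref{lem1-24-feb-2025}, produces the clean pieces $t^{-1}\widehat{L_a}(\widehat{\del_t}\Psi)$, $t^{-1}\widehat{\del_t}(\widehat{L_a}\Psi)$, $t^{-2}\widehat{L_a}(\widehat{L_b}\Psi)$ (all $\lesssim t^{-1}[\del\Psi]_{1,1}$, i.e.\ $R_2$) together with Christoffel corrections $\del_\mu\cdot\nabla_\nu\del_t\cdot\widehat{L_a}\Psi$ and $\del_\mu\cdot\nabla_\nu L_a\cdot\widehat{\del_t}\Psi$ bounded via \eqref{eq1-31-july-2025}--\eqref{eq4-31-july-2025} in a manner consistent with $R_1$, plus doubly-quadratic tails controlled by \eqref{eq5-30-july-2025}-type arguments that are swallowed by $R_0$ under $|\del H|\lesssim \eps_s\zeta$.

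Next I would treat the principal temporal block $\gu^{00}\nabla_t\nabla_t\Psi$ by applying $\nabla_t\Phi = \widehat{\del_t}\Phi + \tfrac14 g^{\mu\nu}\del_\mu\cdot\nabla_\nu\del_t\cdot\Phi$ twice (first with $\Phi=\Psi$, then with $\Phi=\widehat{\del_t}\Psi$ and with the Clifford correction produced at the first step). This generates the main term $\gu^{00}\widehat{\del_t}(\widehat{\del_t}\Psi)$ plus three correction types: a factor of the form $g^{\mu\nu}\del_\mu\cdot\nabla_\nu\del_t\cdot\widehat{\del_t}\Psi$ bounded by \eqref{eq1-31-july-2025} with $|\del H||\del\Psi|_{\vec n}$ weight (fits $R_1$); a derivative on the Christoffel symbols $\nabla_t(g^{\mu\nu}\del_\mu\cdot\nabla_\nu\del_t)\cdot\Psi$ bounded by \eqref{eq6-31-july-2025} with $|\del\del H|+|\del H|^2$ weight (absorbed into $R_0$ through the hypothesis $|\del\del H|\lesssim\zeta\eps_s$); and a genuinely quadratic piece in $|\del H|$ that is equally swallowed by $R_0$. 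The cross term $\gu^{0a}\nabla_t\nabla_{\delu_a}\Psi$ is then a combination of the two preceding schemes, delivering $R_1$ and $R_2$ contributions with an extra $t^{-1}$ from $\delu_a=t^{-1}L_a$.

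For the final summand $g^{\alpha\beta}\nabla_\alpha(\Psiu_\beta^\nu)\nabla_{\delu_\nu}\Psi$ I exploit that the entries $\Psiu_\beta^\nu$ are homogeneous of degree zero in $\{r\le 3t\}$ (see \eqref{eq2-13-june-2025}), so $|\del_\alpha\Psiu_\beta^\nu|\lesssim t^{-1}$; converting $\nabla_{\delu_\nu}\Psi$ into $\widehat{\del_t}\Psi$ or $t^{-1}\widehat{L_a}\Psi$ modulo a Clifford correction feeds the remaining pieces into $R_1$ (for the genuine derivative) and into $R_0$ (for the Christoffel correction times $t^{-1}$). The main obstacle is the bookkeeping at the borderline: one must confirm that no ``$t^{-1}\zetab^{-1}[\Psi]_{1,1}$ without a $|\del H|$ factor'' slips in, which is guaranteed by splitting the Clifford corrections according to whether the Christoffel symbol is contracted with $\del_t$ or $L_a$, and by using the pointwise gain $t^{-1}\lesssim\zetab^{2}$ available on $\Mcal^{\Hcal}$ (via $\zetab\gtrsim(s/t)$ and $s\lesssim t$) to downgrade the few zero-coefficient remainders into the $R_0$ bucket through the $\eps_s$ smallness of $|H|_1$.
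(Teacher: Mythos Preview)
Your plan is the paper's plan: expand $\Box_g\Psi$ in the semi-hyperboloidal frame, trade each $\nabla$ for a $\widehat{\del}$ via \eqref{equa-511m}, and sort the resulting terms with Lemma~\ref{lem1-01-aout-2025}; the paper carries this out explicitly and records the three buckets as \eqref{eq15a-16-aout-2025}--\eqref{eq15c-16-aout-2025}. Two small corrections to your bookkeeping. First, $t^{-1}\widehat{\del_t}(\widehat{L_a}\Psi)$ and $t^{-2}\widehat{L_a}(\widehat{L_b}\Psi)$ are \emph{not} directly $\lesssim t^{-1}[\del\Psi]_{1,1}$ (only expressions of the form $\widehat{L_a}(\widehat{\del_\beta}\Psi)$ fit that norm); the paper therefore applies Proposition~\ref{prop1-22-june-2025} (not Lemma~\ref{lem1-24-feb-2025}) to swap the order, the commutator $\widehat{[\del_t,L_a]}\Psi=\widehat{\del_a}\Psi$ landing in $R_1$ and the deformation-tensor tail in $R_0$, and converts the inner $L_b$ to $x^b\del_t+t\del_b$ to obtain the $(x^b/t)\gu^{ab}\widehat{L_a}(\widehat{\del_t}\Psi)+\gu^{ab}\widehat{L_a}(\widehat{\del_b}\Psi)$ shape. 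Second, the ``downgrade via $t^{-1}\lesssim\zetab^2$ and $\eps_s$-smallness of $|H|_1$'' in your last paragraph is both unnecessary and mis-stated (the hypothesis is $|H|_1\lesssim 1$, not $\lesssim\eps_s$): once the commutation above is done, every term already falls into one of the three bins without any such trade.
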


\begin{proof} 
\bse
{\bf 1.} First of all, recalling that $\delu_a = t^{-1}L_a$, we obtain
\be
\aligned
\Box_g\Psi  & =  \gu^{00}\nabla_{t}(\nabla_t\Psi) + \gu^{a0}t^{-1}\nabla_{L_a}(\nabla_t\Psi) 
+ \gu^{0a}\nabla_t(t^{-1}\nabla_{L_a}\Psi) + \gu^{ab}\nabla_{\delb_b}\big(t^{-1}\nabla_{L_a}\Psi\big)
\\
& \quad + g^{\alpha\beta}\del_{\alpha}\big(\Psiu_{\beta}^{\nu}\big)\nabla_{\delu_\nu}\Psi
- \nabla_W\Psi.
\endaligned
\ee
Then we apply the formulas 
\be
\nabla_{\alpha}\Psi = \widehat{\del_{\alpha}}\Psi 
+ \frac{1}{4}g^{\mu\nu}\del_{\mu}\cdot\nabla_{\nu}\del_{\alpha}\cdot\Psi,
\quad
\nabla_{L_a}\Psi = \widehat{L_a}\Psi 
+ \frac{1}{4}g^{\mu\nu}\del_{\mu}\cdot\nabla_{\nu}L_a\cdot\Psi, 
\ee
\ese
together with Proposition~\ref{prop1-22-june-2025}. Finally, by a tedious calculation we obtain
\begin{subequations}\label{eq15-16-aout-2025}
\begin{equation}\label{eq15a-16-aout-2025}
R_2[H,\Psi]:=t^{-1}\Big(2\gu^{0a}\widehat{L_a}(\widehat{\del_t}\Psi) 
+ (x^b/t)\gu^{ab}\widehat{L_a}(\widehat{\del_t}\Psi) 
+ \gu^{ab}\widehat{L_a}(\widehat{\del_b}\Psi)\Big),
\end{equation}
\begin{equation}\label{eq15b-16-aout-2025}
\aligned
R_1[H,\Psi] 
: & =  \frac{1}{2}\gu^{00}g^{\mu\nu}\del_{\mu}\cdot\nabla_{\nu}\del_t\cdot\widehat{\del_t}\Psi
+\frac{1}{2t}\gu^{a0}g^{\mu\nu}\del_{\mu}\cdot\nabla_{\nu}L_a\cdot\widehat{\del_t}\Psi
+g^{\alpha\beta}\del_{\alpha}\big(\Psiu_{\beta}^0\big)\widehat{\del_t}\Psi 
\\
& \quad +t^{-1}\gu^{0a}\widehat{\del_a}\Psi 
+ t^{-1}\gu^{ab}\delta_{ab}\widehat{\del_t}\Psi
+ (x^b/t^2)\gu^{ab}\widehat{\del_a}\Psi
\\
& \quad +\frac{1}{2t}\gu^{a0}g^{\mu\nu}\del_{\mu}\cdot\nabla_{\nu}\del_t\cdot\widehat{L_a}\Psi
+\frac{1}{4t}\gu^{ab}g^{\mu\nu}\del_{\mu}\cdot\big((x^a/t)\nabla_{\nu}\del_t + \nabla_{\nu}\del_b\big)\cdot\widehat{L_a}\Psi
\\
& \quad -t^{-2}\gu^{0a}\widehat{L_a}\Psi
+\frac{1}{4t^2}\gb^{ab}g^{\mu\nu}\del_{\mu}\cdot\nabla_{\nu}L_b\cdot\widehat{L_a}\Psi
- \frac{x^a}{t^3}\gb^{ab}\widehat{L_a}\Psi,
\endaligned
\end{equation}
\begin{equation}\label{eq15c-16-aout-2025}
\aligned
R_0[H,\Psi] 
&:= \frac{1}{4}\gu^{00}\nabla_t\big(g^{\mu\nu}\del_{\mu}\cdot\nabla_{\nu}\del_t\big)\cdot\Psi
+\frac{1}{4t}\gu^{ab}\nabla_{\delb_a}\big(g^{\mu\nu}\del_{\mu}\cdot\nabla_{\nu}L_b\big)\cdot\Psi
\\
& \quad+\frac{1}{4t}\gu^{0a}\nabla_t\big(g^{\mu\nu}\del_{\mu}\cdot\nabla_{\nu}L_a\big)\cdot\Psi
+\frac{1}{4t}\gu^{a0}\nabla_{L_a}\big(g^{\mu\nu}\del_{\mu}\cdot\nabla_{\nu}\del_t\big)\cdot\Psi
\\
& \quad+\frac{1}{16}\gu^{00}g^{\mu\nu}g^{\mu'\nu'}\del_{\mu}\cdot\nabla_{\nu}\del_t\cdot\del_{\mu'}\cdot\nabla_{\nu'}\del_t\cdot\Psi
\\
& \quad+\frac{1}{16t}\gu^{a0}g^{\mu\nu}g^{\mu'\nu'}\del_{\mu}\cdot\nabla_{\nu}L_a\cdot\del_{\mu'}\cdot\nabla_{\nu'}\del_t\cdot\Psi
\\
& \quad+\frac{1}{16t}\gu^{0a}g^{\mu\nu}g^{\mu'\nu'}\del_{\mu}\cdot\nabla_{\nu}\del_t\cdot\del_{\mu'}\cdot\nabla_{\nu'}L_a\cdot\Psi
\\
& \quad+\frac{1}{16t^2}\gu^{ab}g^{\mu\nu}g^{\mu'\nu'}\del_{\mu}\cdot\nabla_{\nu}L_b\cdot\del_{\mu'}\cdot\nabla_{\nu'}L_a\cdot\Psi
\\
& \quad- \frac{1}{4t^2}\gu^{0a}g^{\mu\nu}\del_{\mu}\cdot\nabla_{\nu}L_a\cdot\Psi
- \frac{1}{4t^2}(x^b/t)\gu^{ab}g^{\mu\nu}\del_{\mu}\cdot\nabla_{\nu}L_a\cdot\Psi
\\
& \quad+ \frac{1}{4}g^{\alpha\beta}\del_{\alpha}\big(\Psiu_{\beta}^0\big)g^{\mu\nu}\del_{\mu}\cdot\nabla_{\nu}\del_t\cdot\Psi
\\
& \quad- \frac{1}{8t}\gu^{0a}\big(g^{\alpha\beta}\pi[L_a]^{\mu\nu}\pi[\del_t]_{\beta\nu}\del_{\mu}\cdot\del_{\alpha} + \pi[\del_t]^{\alpha\beta}\pi[L_a]_{\alpha\beta}\big)\cdot\Psi
\\
& \quad - \frac{1}{8t}\gb^{ab}\big(g^{\alpha\beta}\pi[L_a]^{\mu\nu}\pi[\del_b]_{\beta\nu}\del_{\mu}\cdot\del_{\alpha} + \pi[\del_b]^{\alpha\beta}\pi[L_a]_{\alpha\beta}\big)\cdot\Psi
\\
& \quad
- \frac{x^b}{8t^2}\gb^{ab}\big(g^{\alpha\beta}\pi[L_a]^{\mu\nu}\pi[\del_t]_{\beta\nu}\del_{\mu}\cdot\del_{\alpha} + \pi[\del_t]^{\alpha\beta}\pi[L_a]_{\alpha\beta}\big)\cdot\Psi
\endaligned
\end{equation}
\ese

\vskip.3cm

{\bf 2.} For the derivation of the desired estimates, we apply Lemma~\ref{lem1-01-aout-2025} and the following observation
\be
\aligned
\big|\del_{\alpha}\big(\Psiu_0^{\beta}\big)\big| & \lesssim  t^{-1},
\\
\big|\pi[L_a]^{\mu\nu}\big| & \lesssim  |H|_1,
\qquad
\big|\pi[\del_{\alpha}]_{\alpha\beta}\big|\lesssim |\del H|. \hfill \qedhere
\endaligned
\ee
\end{proof}

%-----------------------------------------------------------

Now we are ready to establish the following preliminary, zero-order estimate. 

\begin{lemma}
\label{lem3-01-aout-2025}
Consider the extended light cone domain $\Mcal_{[s_0,s_1]}\cap\{ r\leq 3t\}$. Assume \eqref{eq-USA-condition} and  
\begin{equation}
|H|_1\lesssim 1,\quad |\del H| + |\del\del H|\lesssim \eps_s\zeta
\end{equation}
hold for a sufficiently small $\eps_s$, together with the curvature condition
\begin{equation}\label{eq3-01-aout-2025}
|R_g|\leq 2M^2.
\end{equation}
%% ??? \label{eq8-01-aout-2025}
Then for any sufficiently regular solution $\Psi$  to the massive Dirac equation 
\begin{equation}\label{eq2-01-aout-2025}
\opDirac \Psi + \mathrm{i}M\Psi = \Phi,
\end{equation}
in the domain in $\Mcal_{[s_0,s_1]}\cap\{ r\leq 3t\}$, one has 
\begin{equation}
\aligned
|\Psi|_{\vec{n}}\lesssim_M& \Big(\frac{|t-r|}{t} + |H^{\Ncal00}|\Big)\big|\widehat{\del_t}(\widehat{\del_t}\Psi)\big|_{\vec{n}}
\\
& \quad + t^{-1}[\del\Psi]_{1,1}
+t^{-1}\zetab^{-1}(1+t|\del H|)|\del \Psi|_{\vec{n}} + t^{-2}\zetab^{-1}(1+t|\del H|)[\Psi]_{1,1},
\\
& \quad + \big|\opDirac\Phi - \mathrm{i}M\Phi\big|_{\vec{n}} + |\nabla_W\Psi|_{\vec{n}}.
\endaligned
\end{equation}
\end{lemma}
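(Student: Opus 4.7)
The strategy is to convert the first-order massive Dirac equation into a Klein--Gordon-type equation for $\Psi$ and then exploit the semi-hyperboloidal decomposition of $\Box_g$ that was established in Lemma~\ref{lem1-30-july-2025}. Applying $\opDirac$ to \eqref{eq2-01-aout-2025} and invoking the Lichnerowicz-type identity $\opDirac^2\Psi = -\Box_g\Psi - \tfrac{1}{4}R_g\Psi$ from Lemma~\ref{lem1-21-feb-2025}, I obtain (as in \eqref{eq1-01-aout-2025})
\[
(M^2 - \tfrac{1}{4}R_g)\,\Psi \;=\; \Box_g\Psi + \opDirac\Phi - \mathrm{i}M\Phi.
\]
Substituting the decomposition of $\Box_g\Psi$ from Lemma~\ref{lem1-30-july-2025} then yields
\[
(M^2 - \tfrac{1}{4}R_g)\,\Psi \;=\; \gu^{00}\,\widehat{\del_t}\bigl(\widehat{\del_t}\Psi\bigr) + R_2[H,\Psi] + R_1[H,\Psi] + R_0[H,\Psi] - \nabla_W\Psi + \opDirac\Phi - \mathrm{i}M\Phi,
\]
which is a pointwise identity for $\Psi$ itself.

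Next, I take $|\cdot|_{\vec n}$ of both sides and use the curvature hypothesis \eqref{eq3-01-aout-2025} to observe that $M^2 - \tfrac{1}{4}R_g$ is a real scalar function bounded below by $M^2/2 > 0$, so $|(M^2-\tfrac{1}{4}R_g)\Psi|_{\vec n} \geq (M^2/2)|\Psi|_{\vec n}$ and one may divide through. The bounds on $R_1[H,\Psi]$ and $R_2[H,\Psi]$ supplied by Lemma~\ref{lem1-30-july-2025} match exactly the second and third lines of the claimed inequality, while the $R_0[H,\Psi]$ term satisfies $|R_0[H,\Psi]|_{\vec n} \lesssim \eps_s |\Psi|_{\vec n}$ and is absorbed into the left-hand side for $\eps_s$ sufficiently small (with the smallness threshold depending on $M$). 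The terms $|\nabla_W\Psi|_{\vec n}$ and $|\opDirac\Phi - \mathrm{i}M\Phi|_{\vec n}$ are already present in the statement.

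It remains to extract the sharp coefficient $|\gu^{00}|\lesssim |t-r|/t + |H^{\Ncal 00}|$ multiplying the second derivative. I expand the co-frame element $\thetau^0 = \diff t - (x^a/t)\diff x^a = (\diff t - \diff r) + \tfrac{t-r}{t}\diff r$ and compute
\[
\gu^{00} \;=\; g(\thetau^0,\thetau^0) \;=\; H^{\Ncal 00} - 2\tfrac{t-r}{t} + \Bigl(\tfrac{t-r}{t}\Bigr)^2 + 2\tfrac{t-r}{t}\,H(\diff t-\diff r,\diff r) + \Bigl(\tfrac{t-r}{t}\Bigr)^2 H(\diff r,\diff r),
\]
where I have used $\eta(\diff t-\diff r,\diff t-\diff r)=0$ so that only the $H^{\Ncal 00}$ contribution survives from the Minkowski part of the null piece. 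On $\{r\le 3t\}$ one has $|t-r|/t \le 2$, so $(t-r)^2/t^2 \lesssim |t-r|/t$, and the smallness $|H|\lesssim \eps_s$ built into the hypothesis turns the two cross terms into $O(|t-r|/t)$. Thus $|\gu^{00}|\lesssim |t-r|/t + |H^{\Ncal 00}|$, which completes the estimate. The only genuinely new ingredient beyond assembly of earlier results is the curvature absorption step: without the bound $|R_g|\le 2M^2$ the Klein--Gordon coefficient $M^2 - \tfrac{1}{4}R_g$ need not be sign-definite, and the whole argument collapses; this is the sole obstruction I anticipate and it is handled cleanly by the hypothesis.
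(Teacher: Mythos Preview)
Your proposal is correct and follows essentially the same approach as the paper: both derive the Klein--Gordon identity from the Dirac equation via Lemma~\ref{lem1-21-feb-2025}, substitute the decomposition of $\Box_g\Psi$ from Lemma~\ref{lem1-30-july-2025}, absorb the curvature term $\tfrac14 R_g\Psi$ and the remainder $R_0[H,\Psi]$ into the left-hand side using $|R_g|\le 2M^2$ and the smallness of $\eps_s$, and bound $|\gu^{00}|$ by $|t-r|/t + |H^{\Ncal00}|$. Your expansion of $\gu^{00}$ via $\thetau^0 = (\diff t-\diff r) + \tfrac{t-r}{t}\diff r$ is algebraically equivalent to the paper's expansion \eqref{eq14-16-aout-2025} (indeed $-2\tfrac{t-r}{t}+(\tfrac{t-r}{t})^2 = \tfrac{r^2-t^2}{t^2}$), and the only cosmetic difference is that you place $-\tfrac14 R_g\Psi$ on the left while the paper leaves it on the right.
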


\begin{proof} 
\bse
We derive first an estimate on the metric component $\gu^{00}$. 
\begin{equation}\label{eq14-16-aout-2025}
\aligned
\gu^{00}  & =  \underline{\eta}^{00} + \Hu^{00} 
= - \frac{t^2-r^2}{t^2} + H^{00} - \frac{2x^a}{t}H^{a0} + \frac{x^b}{t^2}H^{ab}
\\
& =  \frac{r^2-t^2}{t^2} + H^{\Ncal00} + \frac{2x^a}{r}H^{a0}\Big(1- \frac{r}{t}\Big) 
+ \frac{x^ax^b}{r^2}H^{ab}\Big(\frac{r^2}{t^2}-1\Big).
\endaligned
\end{equation}
Then provided that \eqref{eq2-14-june-2025} holds with a sufficiently small $\eps_s$, we have $|H^{\alpha\beta}|\lesssim 1$ and we thus obtain
\begin{equation}\label{eq9-01-aout-2025}
\big|\gu^{00}\big|\lesssim \frac{|t-r|}{t} + |H^{\N00}|. 
\end{equation}
Next, recalling \eqref{eq1-01-aout-2025} in combination with Lemma~\ref{lem1-30-july-2025}, we have
\begin{equation}\label{eq1-16-aout-2025}
\aligned
M^2\Psi  & =  \Box_g\Psi + \frac{1}{4}R\Psi + \opDirac\Phi - \mathrm{i}M\Phi
\\
& =  \gu^{00}\widehat{\del_t}\big(\widehat{\del_t}\Psi\big) + R_2[H,\Psi] + R_1[H,\Psi] + R_0[H,\Psi] + \frac{1}{4}R\Psi + \opDirac\Phi - \mathrm{i}M\Phi - \nabla_W\Psi
\endaligned
\end{equation}
which leads us to, provided that $|R|\leq 2M^2$ and $\eps_s$ sufficiently small,
$$
\aligned
|\Psi|_{\vec{n}}\lesssim_M&|\gu^{00}|\big|\widehat{\del_t}(\widehat{\del_t}\Psi)\big|_{\vec{n}}
+ t^{-1}[\del\Psi]_{1,1}
+t^{-1}\zetab^{-1}(1+t|\del H|)|\del \Psi|_{\vec{n}} + t^{-2}\zetab^{-1}(1+t|\del H|)[\Psi]_{1,1},
\\
& \quad + \big|\opDirac\Phi - \mathrm{i}M\Phi\big|_{\vec{n}} + |\nabla_W\Psi|_{\vec{n}}.
\endaligned
$$
This concludes the proof. 
\ese
\end{proof}

%----------------------------------------------------------- 

\paragraph{Second main result.}

We finally reach the following main result of this section. 

\begin{proposition}[Pointwise estimates for massive Dirac fields. II]
\label{prop1-17-aout-2025}
Assume that \eqref{eq-USA-condition} and, in addition,
\bse
\label{eq-KG-condition} 
\bel{eq9-16-aout-2025}
|H|_{p+1}\lesssim \eps_s\zeta,\qquad |W|_{p+1}\lesssim \eps_s, 
\ee
\bel{eq9-16-aout-2025-bis}
|\del H|_{p,k} + |\del\del H|_{p,k} \lesssim \eps_s\zeta,\quad |R|_{p,k}\lesssim \eps_s, 
\end{equation}
\ese
hold for a sufficiently small $\eps_s$. Then for any sufficiently regular solution $\Phi$ to the Dirac equation (with $M>0$) 
\begin{equation}\label{eq1-15-aout-2025}
\opDirac \Psi + \mathrm{i}M\Psi = 0
\end{equation}
in the near-light-cone domain $\Mcal_{[s_0,s_1]}\cap \{3t/4\leq r\leq 3t\}$,
one has 
\bel{eq12-19-aout-2025}
[\Psi]_{p,k}\lesssim_{M,p} \Big(\frac{|t-r|+1}{t} + |H^{\Ncal00}|_{p,k}\Big) [\del\del\Psi]_{p,k}
+ \big(t^{-1} + |\del H|\big)[\del\Psi]_{p+1,k+1} + t^{-1}[\Psi]_{p+1,k+1}.
\ee
\end{proposition}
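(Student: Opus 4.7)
The plan is to differentiate the massive Dirac equation \eqref{eq1-15-aout-2025} by an arbitrary admissible operator $\mathscr{Z}^I$ of type $(p,k)$ and then apply the zero-order Klein--Gordon-type estimate of Lemma~\ref{lem3-01-aout-2025} to the commuted spinor $\mathscr{Z}^I\Psi$. Commuting $\mathscr{Z}^I$ past $\opDirac$ yields
\begin{equation*}
\opDirac(\mathscr{Z}^I\Psi) + \mathrm{i}M\,\mathscr{Z}^I\Psi = -[\mathscr{Z}^I,\opDirac]\Psi =: \Phi_I,
\end{equation*}
so $\mathscr{Z}^I\Psi$ solves a massive Dirac equation with controllable source. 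Lemma~\ref{lem3-01-aout-2025} then provides a pointwise bound of $|\mathscr{Z}^I\Psi|_{\vec{n}}$ in terms of a principal term $|\gu^{00}|\,|\widehat{\del_t}(\widehat{\del_t}\mathscr{Z}^I\Psi)|_{\vec{n}}$, the remainders $R_0[H,\mathscr{Z}^I\Psi]$, $R_1[H,\mathscr{Z}^I\Psi]$, $R_2[H,\mathscr{Z}^I\Psi]$ of Lemma~\ref{lem1-30-july-2025}, plus the forcing $|\opDirac\Phi_I - \mathrm{i}M\Phi_I|_{\vec{n}}$ and $|\nabla_W\mathscr{Z}^I\Psi|_{\vec{n}}$; the curvature hypothesis \eqref{eq3-01-aout-2025} is secured by \eqref{eq9-16-aout-2025-bis} for small $\eps_s$.

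The next step is to convert the iterated modified derivative $\widehat{\del_t}(\widehat{\del_t}\mathscr{Z}^I\Psi)$ into the high-order norm $[\del\del\Psi]_{p,k}$. This is done using Proposition~\ref{prop1-21-july-2025} (and its consequence Lemma~\ref{lem1-16-aout-2025}) on the commutators $[\widehat{\del_\alpha},\mathscr{Z}^I]$, which under \eqref{eq-vect-condition} (implied by \eqref{eq9-16-aout-2025}) reorders $\widehat{\del_t}\widehat{\del_t}\mathscr{Z}^I$ as $\mathscr{Z}^I\widehat{\del_t}\widehat{\del_t}$ plus lower-order derivatives of $\Psi$ of the form $[\del\Psi]_{p,k}$ and cubic remainders $\zetab^{-1}\Hcom_{p-1}[\Psi]$, all of which are absorbed by the last two terms on the right-hand side of \eqref{eq12-19-aout-2025} thanks to the smallness \eqref{eq-KG-condition}. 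Parallel estimates apply to $R_1, R_2$, whose explicit structure \eqref{eq15a-16-aout-2025}--\eqref{eq15b-16-aout-2025} shows they are linear combinations of $\widehat{\del}\Psi$ and $t^{-1}\widehat{L_a}\Psi$ with coefficients $t^{-1}$ or $\zetab^{-1}|\del H|$; commuting these with $\mathscr{Z}^I$ via the Moser-type Proposition~\ref{prop1-15-july-2025} produces exactly the $\big(t^{-1}+|\del H|\big)[\del\Psi]_{p+1,k+1}$ contribution. Similarly the zero-order $R_0[H,\mathscr{Z}^I\Psi]$ term reduces under \eqref{eq-KG-condition} to $\eps_s [\Psi]_{p,k}$, which can be absorbed on the left; the source $\Phi_I=-[\mathscr{Z}^I,\opDirac]\Psi$ and its derivative $\opDirac\Phi_I - \mathrm{i}M\Phi_I$ are bounded by Propositions~\ref{prop1-23-july-2025} and~\ref{prop1-19-july-2025} to yield the same mixed products $[\del\Psi]_{p_1-1,k_1}|\del H|_{p_2,k_2}$ and $[\del\Psi]_{p_1-1,k_1-1}|H|_{p_2+1,k_2+1}$, which under \eqref{eq-KG-condition} are controlled by $\eps_s |\del H|\cdot[\del\Psi]_{p+1,k+1}$ and $t^{-1}[\Psi]_{p+1,k+1}$, respectively.

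The main obstacle is the high-order control of the structural coefficient $\gu^{00}$, which governs the size of the ``forbidden'' direction along the light cone. From the explicit formula \eqref{eq14-16-aout-2025} we have
\begin{equation*}
\gu^{00} = \frac{r^2-t^2}{t^2} + H^{\Ncal 00} + \frac{2x^a}{r}H^{a0}\Big(1-\tfrac{r}{t}\Big) + \frac{x^a x^b}{r^2}H^{ab}\Big(\tfrac{r^2}{t^2}-1\Big),
\end{equation*}
so any admissible operator $\mathscr{Z}^J$ acting on $\gu^{00}$ produces, on the purely geometric first piece, terms of size at most $(|t-r|+1)/t$ --- the extra $+1$ in the numerator in \eqref{eq12-19-aout-2025} (as compared with \eqref{eq9-01-aout-2025}) arises precisely from partial derivatives landing on $(r^2-t^2)/t^2$, which produce $s/t$ or $1/t$ factors. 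The mixed terms contribute $|H|_{p,k}\cdot \big(1-r/t\big)$, controlled by $|H^{\Ncal00}|_{p,k}+|\del H|_{p,k}$ modulo terms that are already of size $\eps_s\zeta$, and hence absorbable. Combining this structural bound with the commutator estimates above and choosing $\eps_s$ small enough to absorb the $\eps_s[\Psi]_{p,k}$ contribution, one closes the argument by a straightforward induction on $(p,k)$, starting from the base case $p=k=0$ already proven in Lemma~\ref{lem3-01-aout-2025}.
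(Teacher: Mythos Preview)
Your approach differs from the paper's in a key structural way, and this creates a genuine gap in the handling of the source term.

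The paper does \emph{not} apply Lemma~\ref{lem3-01-aout-2025} to the commuted spinor $\mathscr{Z}^I\Psi$. Instead, it first derives the Klein--Gordon-type identity $M^2\Psi = \gu^{00}\widehat{\del_t}(\widehat{\del_t}\Psi) + R_2 + R_1 + R_0 + \tfrac14 R\Psi - \nabla_W\Psi$ (this is \eqref{eq1-16-aout-2025} with $\Phi=0$), and \emph{then} applies $\mathscr{Z}^I$ to both sides of this scalar identity. The resulting right-hand side $\mathscr{Z}^I\bigl(\gu^{00}\widehat{\del_t}^2\Psi\bigr)+\mathscr{Z}^I(R_i)+\dots$ is a product of explicit tensor/spinor quantities, controlled entirely by the Moser-type estimates of Proposition~\ref{prop1-15-july-2025}. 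No commutator source $\Phi_I$ ever appears, and hence no $\opDirac\Phi_I-\mathrm{i}M\Phi_I$ needs to be estimated.

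In your route, Lemma~\ref{lem3-01-aout-2025} applied to $\mathscr{Z}^I\Psi$ produces the forcing term $|\opDirac\Phi_I-\mathrm{i}M\Phi_I|_{\vec{n}}$. You claim this is handled by Propositions~\ref{prop1-23-july-2025} and~\ref{prop1-19-july-2025}, but those results bound $\Phi_I=[\mathscr{Z}^I,\opDirac]\Psi$ itself, not $\opDirac\Phi_I$. Controlling $\opDirac([\mathscr{Z}^I,\opDirac]\Psi)-\mathrm{i}M[\mathscr{Z}^I,\opDirac]\Psi$ is precisely the content of Proposition~\ref{lem1-22-aout-2025}, whose proof occupies all of Section~\ref{section=N12} and relies on the anti-commutator identities of Lemma~\ref{lem1-19-aout-2025} together with the quadratic--cubic decomposition of Proposition~\ref{prop-quad-cubic}. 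Moreover, that proposition is stated only for the hyperboloidal domain; you would need to adapt it to the near-light-cone region, which is a substantial additional task you have not addressed.

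Your third paragraph, on the high-order control of $\gu^{00}$, actually belongs to the paper's approach rather than yours: in your scheme the coefficient multiplying $\widehat{\del_t}^2\mathscr{Z}^I\Psi$ is the \emph{undifferentiated} $|\gu^{00}|$, so no $\mathscr{Z}^J\gu^{00}$ ever arises. That discussion is exactly what the paper needs (and carries out in Step~2 of its proof) because it must estimate $\mathscr{Z}^I\bigl(\gu^{00}\,\widehat{\del_t}^2\Psi\bigr)$ via Leibniz. Its presence in your write-up suggests you are conflating the two strategies.
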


\begin{proof} {\bf 1.} 
\bse
We only need a proof in the region $\Mcal_{[s_0,s_1]}\cap\{3t/4\leq r\leq 3t\}$, since when $r\leq 3t/4$ or $r\geq 3t$, $\frac{|t-r|+1}{t}\geq 1$.
When \eqref{eq1-15-aout-2025} holds, \eqref{eq1-16-aout-2025} reduces to
\be
M^2\Psi = \gu^{00}\widehat{\del_t}\big(\widehat{\del_t}\Psi\big) + R_2[H,\Psi] + R_1[H,\Psi] + R_0[H,\Psi] + \frac{1}{4}R\Psi - \nabla_W\Psi.
\ee
Then for any admissible operator $\mathscr{Z}^I$ of type $(p,k)$, we can write 
\begin{equation}\label{eq5-17-aout-2025}
\aligned
M^2\mathscr{Z}^I\Psi  & =  \mathscr{Z}^I\big(\gu^{00}\widehat{\del_t}(\widehat{\del_t}\Psi)\big) 
+ \mathscr{Z}^I\big(R_2[H,\Psi] + R_1[H,\Psi] + R_0[H,\Psi]\big) 
\\
& \quad - \frac{1}{4}\mathscr{Z}^I(R\Psi) - \mathscr{Z}^I\big(\nabla_W\Psi\big).
\endaligned
\end{equation}
We need to control each term in the right-hand side.
\ese

\hskip.3cm
\bse
\noindent{\bf 2.} For the first term, we observe that
\begin{equation}
\Big|\mathscr{Z}^I\Big(\frac{r^2-t^2}{t^2}\Big)\Big|\lesssim_p \frac{|t-r|+1}{t},\quad\text{in }\Mcal_{[s_0,s_1]}\cap\{3t/4\leq r\leq 3t\}.
\end{equation}
This is due to the homogeneity of $\frac{r^2-t^2}{t^2}$ in $\Mcal_{[s_0,s_1]}\cap\{r\leq 3t\}$ and the identity
\be
L_a\Big(\frac{r^2-t^2}{t^2}\Big) = \frac{2x^a}{t}\Big(\frac{r^2-t^2}{t^2}\Big).
\ee
In the same manner, we have 
\begin{equation}
\Big|\mathscr{Z}^I\Big(\frac{t-r}{t}\Big)\big|\lesssim_p \frac{|t-r|+1}{t} \qquad \text{ in } \Mcal_{[s_0,s_1]}\cap\{3t/4\leq r\leq 3t\}.
\end{equation}
Thus from \eqref{eq14-16-aout-2025} we obtain 
\be
\big|\mathscr{Z}^I(\gu^{00})\big|\lesssim_p \frac{|t-r|+1}{t}\big(1 + |H|_{p,k}\big) + |H^{\Ncal00}|_{p,k}.
\ee
Then thanks to \eqref{eq9-16-aout-2025} we have 
\begin{equation}
\big|\mathscr{Z}^I(\gu^{00})\big|\lesssim_p \frac{|t-r|+1}{t}+ |H^{\Ncal00}|_{p,k}.
\end{equation}
Then thanks to \eqref{eq1-17-july-2025},
\begin{equation}\label{eq1-17-aout-2025}
\big[\,\gu^{00}\widehat{\del_t}(\widehat{\del_t}\Psi)\big]_{p,k}
\lesssim_p \Big(\frac{|t-r|+1}{t}+|H^{\Ncal00}|_{p,k}\Big)[\del\del\Psi]_{p,k}.
\end{equation}
\ese

\hskip.3cm

\bse
\noindent{\bf 3.} For the terms $R_i[H,\Psi]$, the estimate relies on the explicit expressions \eqref{eq15-16-aout-2025} and Proposition~\ref{prop1-15-july-2025}. We also need to point out that, thanks to the homogeneity of $x^a,t$ and $(x^a/r), (x^a/t)$ in $\{3t/4\leq r\leq 3t\}$,
\begin{equation}
\aligned
&|\gu^{\alpha\beta}|_{p,k} + |g^{\alpha\beta}|_{p,k}\lesssim_p |H|_{p,k}\lesssim \eps_s\zeta,
\\
&|\nabla_{\mu}\del_{\alpha}|_{p,k}\lesssim_p |\del H|_{p,k},
\\
&|\nabla_{\mu}L_a|_{p,k}\lesssim_p (1 + t|\del H|_{p,k}).
\endaligned
\end{equation}
Then we have 
\begin{equation}\label{eq4-17-aout-2025}
\aligned
\big|R_2[H,\Psi]\big|_{\vec{n}}& \lesssim_p  t^{-1}[\del\Psi]_{p+1,k+1},
\\
\big|R_1[H,\Psi]\big|_{\vec{n}}& \lesssim_p  t^{-1}[\Psi]_{p+1,p+1} 
+ \zetab^{-1}\Big(\frac{|t-r|+1}{t} + |H^{\Ncal00}|_{p,k} + 1\Big)|\del H|[\del\Psi]_{p,k} 
\\
& \quad + t^{-1}\zetab^{-1}[\del\Psi]_{p,k},
\\
\big|R_0[H,\Psi]\big|_{\vec{n}}& \lesssim_p  \eps_s[\Psi]_{p,k} + t^{-1}[\Psi]_{p,k}.
\endaligned
\end{equation}
Here, the estimate on the remainder $R_2$ is direct. For $R_1$, we write each term into the one of the following two forms: 
\be
S^{\alpha_1\alpha_2\cdots\alpha_n}\del_{\alpha_1}\cdot\del_{\alpha_2}\cdot\cdots\cdot\del_{\alpha_n}\cdot\widehat{\del_{\beta}}\Psi,
\quad
T^{\alpha_1\alpha_2\cdots\alpha_n}\del_{\alpha_1}\cdot\del_{\alpha_2}\cdot\cdots\cdot\del_{\alpha_n}\cdot\widehat{L_a}\Psi.
\ee
Then we apply \eqref{eq1-17-july-2025} and \eqref{eq3-23-july-2025}. We thus only need to bound the coefficients $S^{\alpha_1\alpha_2\cdots\alpha_n}$ and/or $T^{\alpha_1\alpha_2\cdots\alpha_n}$. For example
\be
\aligned
& t^{-2}\gb^{ab}g^{\mu\nu}\del_{\mu}\cdot\nabla_{\nu}L_b\cdot\widehat{L_a}\Psi
\\
& = t^{-2}\gu^{ab}g^{\mu\nu}\del_{\mu}\cdot \nabla_{\nu}(t\del_b+x^b\del_t)\cdot\widehat{L_a}\Psi
\\
& = t^{-2}\gu^{ab}g^{\mu0}\del_{\mu}\cdot\del_b\cdot\widehat{L_a}\Psi
+ t^{-2}\gu^{ab}g^{\mu b}\del_{\mu}\cdot\del_t\cdot\widehat{L_a}\Psi
\\
& \quad +t^{-1}\gu^{ab}g^{\mu\nu}\Gamma_{\nu b}^{\gamma}\del_{\mu}\cdot \del_{\gamma}\cdot\widehat{L_a}\Psi
+t^{-1}(x^b/t)\gu^{ab}g^{\mu\nu} \Gamma_{\nu 0}^{\gamma}
\del_{\mu}\cdot\del_{\gamma}\cdot\widehat{L_a}\Psi. 
\endaligned
\ee
Then, by the homogeneity of $t^{-1}, (x^a/t)$ and \eqref{lem1-11-july-2025} and provided that \eqref{eq9-16-aout-2025} holds, we arrive at 
\begin{equation}
\aligned
& \big|t^{-2}\gu^{ab}g^{\mu0}\big|_{p,k} 
+ \big|t^{-2}\gu^{ab}g^{\mu b}\big|_{p,k}\lesssim_p t^{-2},
\\
& \big|t^{-1}\gu^{ab}g^{\mu\nu}\Gamma_{\nu b}^{\gamma}\big|_{p,k}
+\big|t^{-1}(x^b/t)\gu^{ab}g^{\mu\nu} \Gamma_{\nu 0}^{\gamma}\big|_{p,k}
\lesssim_pt^{-1}|\del H|_{p,k}.
\endaligned
\end{equation}
Thus thanks to Proposition~\ref{prop1-15-july-2025}, we obtain 
\begin{equation}
\big|t^{-2}\gb^{ab}g^{\mu\nu}\del_{\mu}\cdot\nabla_{\nu}L_b\cdot\widehat{L_a}\Psi\big|_{p,k}
\lesssim \zetab^{-1}(t^{-2}+t^{-1}|\del H|_{p,k})[\Psi]_{p+1,k+1}
\lesssim t^{-1}[\Psi]_{p+1,k+1}.
\end{equation}
For the terms in $R_3$, the strategy is similar. We write each terms in the form
\be
S^{\alpha_1\alpha_2\cdots\alpha_n}\del_{\alpha_1}\cdot\del_{\alpha_2}\cdot\cdots\cdot\del_{\alpha_n}\cdot\widehat{\del_{\beta}}\Psi.
\ee
Observe that for the decomposition of the first four terms in \eqref{eq15c-16-aout-2025}, we apply \eqref{eq16-16-aout-2025}.
\ese

\hskip.3cm

\bse
\noindent {\bf 4.} For the term involving the scalar curvature $R_g$, we have 
\begin{equation}\label{eq2-17-aout-2025}
|R_g \Psi|_{\vec{n}}\lesssim |R_g |_{p,k}[\Psi]_{p,k}\lesssim \eps_s[\Psi]_{p,k}.
\end{equation}
For the term involving the generalized wave gauge condition, we have 
\be
\nabla_W\Psi = W^{\alpha}\nabla_{\alpha}\Psi 
= W^{\alpha}\widehat{\del_{\alpha}}\Psi 
+ \frac{1}{4}g^{\mu\nu}W^{\alpha}\del_{\mu}\cdot\nabla_{\nu}\del_{\alpha}\cdot\Psi
\ee
and then
\begin{equation}\label{eq3-17-aout-2025}
\aligned
\big|\nabla_W\Psi\big|_{\vec{n}}& \lesssim_p  |W|_{p,k}[\del \Psi]_{p,k} + \zetab^{-1}|\del H||W|_{p,k}[\Psi]_{p,k}
\\
& \lesssim  |W|_{p,k}[\del\Psi]_{p,k} + \eps_s[\Psi]_{p,k}.
\endaligned
\end{equation}
\ese

\hskip.3cm

\noindent{\bf 5.} Finally, we substitute \eqref{eq1-17-aout-2025}, \eqref{eq4-17-aout-2025}, \eqref{eq2-17-aout-2025} and \eqref{eq3-17-aout-2025} into \eqref{eq5-17-aout-2025}. We conclude with 
\be
\aligned
\, [\Psi]_{p,k}\lesssim_{M,p}& \Big(\frac{|t-r|+1}{t} + |H^{\Ncal00}|_{p,k}\Big) [\del\del\Psi]_{p,k}
+ t^{-1}[\del\Psi]_{p+1,k+1}+ t^{-1}[\Psi]_{p+1,k+1}
\\
& \quad + \big(t^{-1} + |\del H|\big)\zetab^{-1}[\del\Psi]_{p,k} .
\endaligned
\ee
We then apply Proposition~\ref{prop1-24-july-2025} on $[\del\Psi]_{p,k}$ and arrive at the desired estimate.
\end{proof}

}

%============================================================================
%============================================================================

\clearpage 

\part{Nonlinear stability of self-gravitating massive Dirac fields}
\label{part-two}

\section{Preliminary to the bootstrap method for the Einstein-Dirac system}
\label{section=N11}

\subsection{ Objective and list of flatness conditions}

{ 

\paragraph{Strategy.}

As a preliminary, we now establish a bridge between our analysis of the Einstein-Dirac system in the present Monograph and the bootstrap framework for the Einstein equations coupled to a massive real-valued field developed in our earlier work~\cite{PLF-YM-two,PLF-YM-PDE}. Our aim is to record, in one place, the estimates that will be invoked repeatedly later, so that we can refer to them without restating material from previous sections.

We group the ingredients of the proof as follows. 
\bei
\item[$\bullet$] Energy estimates, established in Proposition~\ref{prop1-04-april-2025}.

\item[$\bullet$] Sobolev estimates, established in Theorem~\ref{thm2-05-april-2025}.

\item[$\bullet$] Pointwise estimates for the Klein--Gordon equation along orthogonal curves, established in Proposition~\ref{prop2-14-aout-2025}.

\item[$\bullet$] Pointwise inequalities for Dirac commutators, established in Proposition~\ref{prop1-19-july-2025}, Lemma~\ref{lem1-09-oct-2025}.
\eei

\noindent In addition, to streamline the use of the four estimates above, we will rely on the following auxiliary bounds. 
\bei
\item[$\bullet$] Estimates for Clifford products, stated in Proposition~\ref{prop1-15-july-2025}.

\item[$\bullet$] Pointwise estimates for massive spinor fields, stated in Propositions~\ref{prop1-24-july-2025} and~\ref{prop1-17-aout-2025}.

\eei

The results listed here are presented as \textit{formulae praeparatae} (i.e., ready--made templates) so that they can be applied directly within the bootstrap argument. All these estimates hold provided the spacetime metric is sufficiently close to the Minkowski metric, in the sense specified earlier for each statement (smallness of the perturbation $H$, the light-bending condition, and/or the generalized wave coordinate vector $W$, together with the associated decay assumptions). Accordingly, we either assume these flatness conditions at the initial step of the bootstrap or check that they propagate in the bootstrap arguments. Throughout, we work with a collection of $L^2$- and $L^\infty$-type bounds for the metric and matter components up to a sufficiently high order $N$.

%--------------------------------------------------------------

\paragraph{Conditions on the spacetime metric.}

It is convenient to list our conditions on the metric perturbation $H$ and the wave coordinate vector $W$. 
\bse\label{eq-spin-condition}
\bei
\item[$\bullet$] \textbf{Uniform spacelike property:}
% (US conditions).
\begin{equation}\label{eq-US-condition}
\aligned
H^{\Ncal 00}  & :=  g(\diff t- \diff r, \diff t- \diff r)<0,
\qquad && \text{in }\{3t/4\leq r\leq r^{\Ecal}(s)\}\cap \Mcal_{[s_0,s_1]},
\\
|H|  & :=  \max_{\alpha,\beta}|H_{\alpha\beta} | \leq \eps_s\zeta, \quad && \text{in }\Mcal_{[s_0,s_1]}. 
\endaligned
\end{equation}
These are the assumptions in~\eqref{eq-USA-condition}, stating that the metric is close to flat Minkowski, so the slices $\Mcal_s$ are spacelike and the volume form is bounded from below (see Claim~\ref{lem1-02-march-2025}). 

\item[$\bullet$] \textbf{Light-bending property:}
\begin{equation}\label{eq-bending-condition}
H^{\Ncal00} < 0\quad \text{in }\{r\geq 3t/4\}\cap \Mcal_{[s_0,s_1]},
\end{equation}
This generalizes the first condition in~\eqref{eq-US-condition} to a larger domain. Because it will play a distinct role, we list it separately. In Section~\ref{section=N13}, we will explicitly construct a light-bending coordinate chart relative to the reference metric; within the bootstrap mechanism we will show that the perturbation does not destroy this property.

\item[$\bullet$] \textbf{Sobolev-related conditions:}
\begin{equation}\label{eq-USSR-condition}
\aligned
&(s/t)^{-1}|H|_1 + \frac{(s/t)^{-2}|H^{\Ncal00}|_1}{1+(s/t)^{-2}|H^{\Ncal00}|}\lesssim (s/t)^{- \delta},\quad\text{with}\quad 0\leq \delta\leq 1\quad \text{in } \Mcal^{\Hcal}_{[s_0,s_1]},
\\
& \zeta^{-2}|\del H| + \zeta^{-1}|H|_1 + \frac{\zeta^{-2}|H^{\Ncal00}|_1}{1+\zeta^{-2}|H^{\Ncal00}|}\lesssim \zeta^{- \delta},\quad \text{with}\quad 0< \delta\leq 1,\quad \text{ in }  \Mcal^{\ME}_{[s_0,s_1]}.
\endaligned
\end{equation}
These conditions (introduced in~\eqref{eq5a-05-april-2025} and~\eqref{eq6-17-aout-2025}) ensure the metric is sufficiently close to Euclidean so that Klainerman–Sobolev inequalities hold, with an acceptable loss.  These conditions may appear at first to be rather restrictive; however, with a suitable choice of light-bending chart, they are easily satisfied.

\eei
\ese

\bse\label{eq-spin-condition-N}
The two groups of conditions above are zero–order, that is, independent of the differentiation order $N$. In contrast, the following conditions depend on $N$.
\bei

\item[$\bullet$] \textbf{Generalized wave-coordinate condition:}
\begin{equation}\label{eq-France-condition}
|W|_{N-2} = \max_{\alpha}|W^{\alpha}|_{N-2}\lesssim_N \eps_s\mathbbm{1}_{\{r\geq 3t/4\}}r^{-1- \kappa},\quad \quad \kappa> 0, 
\end{equation}
We recall the decomposition $W := W^{\alpha}\del_{\alpha} = g^{\alpha\beta}\nabla_{\alpha}\del_{\beta}$. When $\eps_s=0$, this reduces to the standard wave gauge, in which Einstein's equations form a quasilinear wave system~\cite{CBG}. 
They play a central role in the bootstrap mechanism that was developed for Einstein's vacuum equations~\cite{LR1} and for the Klein-Gordon-Einstein equations~\cite{PLF-YM-two}. Although \eqref{eq-France-condition} may appear in tension with the light-bending condition~\eqref{eq-bending-condition}, Section~\ref{section=N13} will show that the wave structure survives the light-bending change of variables up to lower-order remainders encoded in \eqref{eq-France-condition}.

\item[$\bullet$] \textbf{High-order condition:}
\begin{equation}\label{eq-China-condition}
% PLF: why China here ??? :)
[H]_{[N/2]+1}\lesssim \zeta.
\end{equation}
We have introduced this condition in order to suppress the effect of irrelevant (cubic and higher-order) terms appearing in commutation relations and estimates on Clifford products. 

\item[$\bullet$] Treating conditions:
\begin{equation}\label{eq-UK-condition}
\aligned
|H|_{N-3} & \lesssim \eps_s\zeta,\qquad\quad  
& |W|_{N-2} & \lesssim \eps_s,
\\
|\del H|_{N-3} + |\del\del H|_{N-3} & \lesssim \eps_s\zeta,\quad \qquad
& |R_g|_{N-3} & \lesssim \eps_s, 
\endaligned
\end{equation}
\eei
where $R_g$ is the scalar curvature. These bounds ensure that Propositions~\ref{prop1-24-july-2025} and~\ref{prop1-17-aout-2025} apply, yielding the desired decay for massive spinors. We stress that all conditions here concern only the spacetime geometry; taking the trace of Einstein’s equations relates $R_g$ to a quadratic expression in the field $\Psi$.

\begin{definition} 
The collection~\eqref{eq-spin-condition} is called the \textbf{$(\eps_s,\delta,\kappa)$–flatness condition}. Unless otherwise stated, we assume~\eqref{eq-spin-condition} throughout our analysis.
\end{definition}

\ese

}

%---------------------------------------------------------------------------------------------------------------------------- 

\subsection{ Four key estimates enjoyed by massive spinor fields}

{ 

\paragraph{Notation.}

We consider sufficiently regular solutions $\Psi$ to the massive Dirac equation (with $M>0$)
\be
\opDirac\Psi + \mathrm{i}M\Psi = 0
\ee
defined in the domain $\Mcal_{[s_0,s_1]}$. The spacetime metric $g_{\alpha\beta} = \eta_{\alpha\beta}+H^{\alpha\beta}$ is assumed to satisfy the $(\eps_s,\delta,\kappa)$--flatness condition for some sufficiently small $\eps_s>0$, while $0<\delta \leq 1$ and $0 \leq \kappa\leq 1$. For convenience in the discussion, we introduce the \textbf{high-order Dirac remainders} associated with an arbitrary operator $\mathscr{Z}^I$ 
\be\label{eq10a-18-july-2025}
\aligned
\Phi^I  & :=  \opDirac \big(\mathscr{Z}^I\Psi\big) + \mathrm{i}M\big(\mathscr{Z}^I\Psi\big),
\\
\Phi_{\mu}^I  & :=  \opDirac \big(\mathscr{Z}^I\widehat{\del_{\mu}}\Psi\big) 
+ \mathrm{i}M\big(\mathscr{Z}^I\widehat{\del_{\mu}}\Psi\big).
\endaligned
\ee

Finally, by summation over $\mathscr{Z}^I$ we also introduce the high-order energy densities
\bse
\be
\ebf_{\kappa}^{p,k}[\Psi]: = \sum_{\ord(I)\leq p\atop \rank(I)\leq k}\ebf_{\kappa}[\mathscr{Z}^I\Psi],
\qquad
\ebf_{\kappa}^{p,k}[\del\Psi]: = \sum_{\alpha}\ebf_{\kappa}^{p,k}[\widehat{\del_{\alpha}}\Psi], 
\ee
together with the associated high-order weighted energies
\be
\Ebf_{\kappa}^{p,k}(s,\Psi):= \sum_{\ord(I)\leq p\atop \rank(I)\leq k}\Ebf_{\kappa}(s,\mathscr{Z}^I\Psi),
\qquad
\Ebf_{\kappa}^{p,k}(s,\del\Psi) := \sum_{\alpha}\Ebf_{\kappa}^{p,k}(s,\widehat{\del_{\alpha}}\Psi). 
\ee
\ese

%------------------------------------------------------

\paragraph{Estimates for massive spinor fields.}

The proof of the following statement is given in Section~\ref{section-prop-spin-energy}. It provides the standard energy estimates for a spinor and its derivative. 

\begin{proposition}[General order energy estimates for massive spinor fields]
\label{prop-spin-energy}
Consider an arbitrary spinor field $\Psi$, while $\Phi^I, \Phi^I_\mu$ denote the associated remainders defined in~\eqref{eq10a-18-july-2025}. 
For any $s\in [s_0,s_1]$ and $0\leq k\leq p\leq N$ and $\kappa\geq 0$, the following two energy estimates hold:
\begin{equation}
\Ebf_{\kappa}^{p,k}(s,\del\Psi)^{1/2}
\leq \Ebf_{g,w}^{p,k}(s_0,\del\Psi)^{1/2}
+ C\sum_{\ord(I)\leq p\atop \rank(I)\leq k}\sum_{\mu}
\int_{s_0}^s\|J\zetab^{-1/2}w|\Phi_{\mu}^I|_{\vec{n}}\|_{L_2(\Mcal_\tau)}\diff \tau,
\end{equation}
\begin{equation}
\Ebf_{\kappa}^{p,k}(s,\Psi)^{1/2}
\leq \Ebf_{g,w}^{p,k}(s,\Psi)^{1/2}
+ C\sum_{\ord(I)\leq p\atop \rank(I)\leq k}
\int_{s_0}^s\|J\zetab^{-1/2}w|\Phi^I|_{\vec{n}}\|_{L_2(\Mcal_\tau)}\diff \tau,  
\end{equation}
in which $C$ denotes a universal constant. 
Furthermore, one has 
\begin{equation}\label{eq5-09-oct-2025}
\aligned
& \Ebf_{\kappa}^{\ME,p,k}(s_1,u) 
+\int_{s_0}^{s_1}s\int_{\Mcal^{\ME}_s}\la r-t\ra^{-1}
\big(\omega^{\kappa}\zeta[\Psi]_{\vec{\gamma},p,k}\big)^2\diff x\diff s
+\Ebf^{\Ccal,p,k}(s_0,s_1;\Psi)
\\
& \leq \Ebf_{g,w}^{\ME,p,k}(s_0,u) 
+  C\sum_{\ord(I)\leq p\atop \rank(J)\leq k}
\int_{s_0}^{s_1}\Ebf_{\kappa}^{\ME,p,k}(s,\Psi)\|s\omega^{\kappa}\zeta^2\zetab^{-1/2}|\Phi^I|_{\vec{n}}\|_{L^2(\Mcal^{\ME}_s)}, 
\endaligned
\end{equation}
where $\vec{\gamma} := \mathrm{grad}(r-t)$, $|\Psi|_{\vec{\gamma}}^2 = \la\Psi,\vec{\gamma}\cdot\Psi\ra_{\ourD}$, and
\be
\Ebf^{\Ccal,p,k}(s_0,s_1;\Psi) := \sum_{\ord(I)\leq p\atop \rank(I)\leq k}
\int_{\Ccal_{[s_0,s_1]}}
\la \mathscr{Z}^I\Psi,\vec{n}_{\Ccal}\cdot\mathscr{Z}^I\Psi\ra_{\ourD}\mathrm{Vol}_{\Ccal}\geq 0.
\ee
% }
\end{proposition}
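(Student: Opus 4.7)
The plan is to reduce the proposition to a direct application of the basic weighted energy identity of Proposition~\ref{prop1-04-april-2025} (respectively Proposition~\ref{prop1-05-oct-2025} for the third estimate). First, I would apply the identity to each commuted field $\mathscr{Z}^I\Psi$ (for the second estimate) and $\mathscr{Z}^I\widehat{\del_\mu}\Psi$ (for the first), choosing the explicit weight $w := 1+\omega^{2\kappa}$ with $\omega = \aleph(r-t)$ as introduced earlier. By construction, the sources on the right-hand side are exactly the quantities $\Phi^I$ and $\Phi^I_\mu$ defined in \eqref{eq10a-18-july-2025}, so there is nothing to prove about the equations themselves; only the weighted flux identity needs to be invoked.

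The next ingredient is the monotonicity of the bulk gradient term $\int \la\mathscr{Z}^I\Psi,\mathrm{grad}(w)\cdot\mathscr{Z}^I\Psi\ra_{\ourD}\lapsb\,\mathrm{Vol}_{\sigmab}\,\diff s$: under the light-bending condition \eqref{eq-bending-condition}, Lemma~\ref{lem1-18-aout-2025} guarantees that this term is non-negative and may therefore be dropped from the left-hand side. The remaining step is standard: a Cauchy--Schwarz split in $L^2(\Mcal_s)$ applied to the source integral $\int w\,\Re\la\Phi^I,\mathscr{Z}^I\Psi\ra_{\ourD}\lapsb\,\mathrm{Vol}_{\sigmab}\,\diff s$ produces the product of $\Ebf_\kappa^{p,k}(\tau,\Psi)^{1/2}$ with an $L^2$ norm of the source carrying weight $J\zetab^{-1/2}w^{1/2}$. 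Here I rely on the two volume-form equivalences $\lapsb\,\mathrm{Vol}_{\sigmab}\simeq J\,\diff x$ (Lemma~\ref{lem1-25-mai-2025} combined with Claim~\ref{cor1-16-june-2025}) and $\mathrm{Vol}_{\sigmab}\simeq \zetab\,\diff x$ (Claim~\ref{lem1-02-march-2025}). Differentiating the resulting quadratic inequality in $\Ebf_\kappa^{1/2}$, integrating in $\tau$, and finally summing over all $\mathscr{Z}^I$ of order $\le p$ and rank $\le k$ yields the first two bounds.

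For \eqref{eq5-09-oct-2025} I would run the same argument replacing Proposition~\ref{prop1-04-april-2025} with Proposition~\ref{prop1-05-oct-2025}. This automatically produces both extra terms present in the claim: the conical boundary flux $\Ebf^{\Ccal,p,k}(s_0,s_1;\Psi)$, which is non-negative because $\Ccal_{[s_0,s_1]}$ is spacelike under \eqref{eq-bending-condition}, and the favorable interior bulk term $\la r-t\ra^{-1}(\omega^\kappa\zeta[\mathscr{Z}^I\Psi]_{\vec{\gamma}})^2$ arising from the Clifford contraction with $\mathrm{grad}(r-t)$. Everything else transfers verbatim from the hyperboloidal argument.

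The main point of friction I anticipate is not conceptual but rather the bookkeeping of weight factors: producing the precise combination $J\zetab^{-1/2}w$ inside the $L^2$ source norm requires simultaneously balancing the two volume-form equivalences against the weight defining $\Ebf_\kappa$ in \eqref{eq4-19-aout-2025}--\eqref{eq5-06-oct-2025}, and distributing the factor $(1+\omega^{2\kappa})$ correctly between the source term and the energy in the Cauchy--Schwarz split. Once this accounting is in place, no analytic input beyond the two basic identities and the volume-form equivalences above is required.
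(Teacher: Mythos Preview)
Your proposal is correct and follows essentially the same route as the paper's proof: apply Proposition~\ref{prop1-04-april-2025} with the weight $w=1+\omega^{2\kappa}$, drop the gradient bulk via Lemma~\ref{lem1-18-aout-2025}, perform the Cauchy--Schwarz split using the volume-form equivalences from Claims~\ref{lem1-02-march-2025} and~\ref{cor1-16-june-2025}, differentiate and integrate in $s$, then sum over admissible operators; the third estimate is likewise obtained directly from Proposition~\ref{prop1-05-oct-2025}. Your anticipated difficulty with the weight bookkeeping is exactly where the paper's computation is concentrated, and the paper resolves it in the same way you outline.
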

%---------------------------------------

The proof of the following statement is given in Section~\ref{section-prop-spin-Sobolev}. It provides weighted Sobolev estimates that take advantage of the structure of the Killing fields of Minkowski spacetime. 

\begin{proposition}[General order Sobolev decay for massive spinor fields]\label{prop-spin-Sobolev}
Consider sufficiently regular spinor fields $\Psi$ defined in the slab $\Mcal_{[s_0,s_1]}$.

$\bullet$ In $\Mcal^{\Hcal}_s\subset\Mcal^{\Hcal}_{[s_0,s_1]}$, for $0\leq \delta,\kappa\leq 1$ and all $p\leq N-2$, one has 
\begin{equation}\label{eq1-19-aout-2025}
(s/t)^{-j+1/2+2\delta}t^{3/2}[\Psi]_{p-j,k-j}\lesssim \Ebf_{\kappa}^{\Hcal,p+2,k+2}(s,\Psi)^{1/2},\quad 0\leq j\leq k.
\end{equation}
\bei

\item[$\bullet$] 
In $\{r\leq 3t\}\cap\Mcal^{\ME}_s\subset\{r\leq 3t\}\cap\Mcal^{\ME}_{[s_0,s_1]}$, for $0\leq \kappa\leq 1, 0<\delta\leq 1$ and all $p\leq N-3$ one has
\begin{equation}\label{eq10-03-oct-2025}
\zeta^{-1/2+2\delta}(2+r-t)^{\kappa}(1+r)[\Psi]_{p,k}\lesssim_{\delta}\Ebf_{\kappa}^{\ME,p+3,k+3}(s,\Psi)^{1/2}.
\end{equation}
%    }

\item[$\bullet$] In $\{r\geq 3t\}\cap\Mcal^{\ME}_s\subset\{r\geq 3t\}\cap\Mcal^{\ME}_{[s_0,s_1]}$, for $0\leq \kappa\leq 1$ and all $p\leq N-2$ one has 
\begin{equation}\label{eq2-19-aout-2025}
(1+r)^{1+\kappa}[\Psi]_{p,k}\lesssim \Ebf_{\kappa}^{\ME,p+2,k+2}(s,\Psi)^{1/2}.
\end{equation}

\item[$\bullet$] In $\{r\leq 3t\}\cap\Mcal^{\ME}_s\subset\{r\leq 3t\}\cap\Mcal^{\ME}_{[s_0,s_1]}$, for $0\leq \kappa\leq 1, 0<\delta\leq 1$ and all $p\leq N-3$ one has 
\begin{equation}\label{eq3-19-aout-2025}
\aligned
& \zeta^{1/2+2\delta}(2+r-t)^{\kappa}(1+r)[\Psi]_{p,k}
\\
& \lesssim_{\delta}
\Big(\frac{|t-r|+1}{r}+|H^{\Ncal00}|_{p,k}\Big)\Ebf_{\kappa}^{\ME,p+2,k+2}(s,\del\del\Psi)^{1/2}
\\
& \quad+\big(r^{-1}+|\del H|\big)\big(\Ebf_{\kappa}^{\ME,p+3,k+3}(s,\Psi)^{1/2} + \Ebf_{\kappa}^{\ME,p+3,k+3}(s,\del\Psi)^{1/2}\big).
\endaligned
\end{equation}

\item[$\bullet$]
In $\{r\leq 3t\}\cap\Mcal^{\ME}_s\subset\{r\leq 3t\}\cap\Mcal^{\ME}_{[s_0,s_1]}$, for $0\leq \kappa\leq 1, 0<\delta\leq 1$ and all $p\leq N-4$ one has 
\begin{equation}\label{eq7-04-oct-2025}
\aligned
& \zeta^{-1/2+2\delta}(2+r-t)^{\kappa}(1+r)[\Psi]_{p,k}
\\
& \lesssim_{\delta}
\Big(\frac{|t-r|+1}{r}+|H^{\Ncal00}|_{p,k}\Big)\Ebf_{\kappa}^{\ME,p+3,k+3}(s,\del\del\Psi)^{1/2}
\\
& \quad+\big(r^{-1}+|\del H|\big)\big(\Ebf_{\kappa}^{\ME,p+4,k+4}(s,\Psi)^{1/2} + \Ebf_{\kappa}^{\ME,p+3,k+3}(s,\del\Psi)^{1/2}\big).
\endaligned
\end{equation}
%} 
\eei 
\end{proposition}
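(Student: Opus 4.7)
The plan is to reduce each item to the Sobolev inequalities of Theorem~\ref{thm2-05-april-2025} applied to $\mathscr{Z}^I\Psi$ for each admissible operator $\mathscr{Z}^I$ of appropriate type, and then combine these with the commutator calculus of Section~\ref{section=N8} and, for items \eqref{eq3-19-aout-2025} and \eqref{eq7-04-oct-2025}, with the Klein--Gordon pointwise bound of Proposition~\ref{prop1-17-aout-2025}. The $(\eps_s,\delta,\kappa)$--flatness conditions \eqref{eq-spin-condition}--\eqref{eq-spin-condition-N} are the ones verifying the hypotheses of all these results, so no additional smallness assumption is needed beyond the standing one.

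For the hyperboloidal estimate~\eqref{eq1-19-aout-2025}, I would first treat the case $j=0$ by applying~\eqref{eq9-16-june-2025} to $\mathscr{Z}^I\Psi$ for every $\mathscr{Z}^I$ of type $(p,k)$, noting that \eqref{eq-USSR-condition} supplies the decay condition~\eqref{eq5a-05-april-2025}. The right--hand side then involves at most two extra Lie derivatives by boosts, hence is controlled by $\Ebf_{\kappa}^{\Hcal,p+2,k+2}(s,\Psi)^{1/2}$. To pass from $j=0$ to $j\ge 1$, I would exploit the fact that in the hyperboloidal region the semi--hyperboloidal identity $\del_a=t^{-1}L_a-(x^a/t)\del_t$ allows one to convert a loss of $j$ units of rank into a factor $(s/t)^{-j}$: concretely, I would use the high-order estimate of Proposition~\ref{prop1-15-july-2025} together with the homogeneous--coefficient bounds of Section~\ref{section===83} to write any admissible operator of type $(p-j,k-j)$ as a linear combination of operators of type $(p,k)$ with coefficients bounded by $(s/t)^{-j}$, and then invoke the case $j=0$.

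For items~\eqref{eq10-03-oct-2025}, \eqref{eq2-19-aout-2025} and the first line of~\eqref{eq3-19-aout-2025}, the strategy is parallel but based on the merging--Euclidean inequalities \eqref{eq5-18-aout-2025} and \eqref{eq11-19-aout-2025}. In the sub--region $\{r\leq 3t\}\cap\Mcal^{\ME}_s$, the condition \eqref{eq6-17-aout-2025} is supplied by \eqref{eq-USSR-condition}, so \eqref{eq5-18-aout-2025} yields \eqref{eq10-03-oct-2025} after applying it to $\mathscr{Z}^I\Psi$ with $\ord(I)\le p$, $\rank(I)\le k$, together with commutations allowing one to replace $\widehat{\del}^j\widehat{\Omega}^K$ by a sum of admissible operators of total order at most three higher. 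In $\{r\geq 3t\}\cap\Mcal^{\ME}_s$ one has $\zeta\equiv 1$ and the geometry simplifies, yielding \eqref{eq2-19-aout-2025} directly from \eqref{eq11-19-aout-2025}.

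For the two Klein--Gordon improved estimates \eqref{eq3-19-aout-2025} and \eqref{eq7-04-oct-2025}, the new ingredient is Proposition~\ref{prop1-17-aout-2025}. Applied pointwise to $\mathscr{Z}^I\Psi$ (whose commuted Dirac source is bounded via Proposition~\ref{prop1-19-july-2025}, recalling that $\Psi$ solves $\opDirac\Psi+\mathrm{i}M\Psi=0$ so that $\mathscr{Z}^I\Psi$ solves a similar equation with commutator remainder), it yields
\[
[\Psi]_{p,k}\lesssim_{M,p}\Bigl(\tfrac{|t-r|+1}{t}+|H^{\Ncal 00}|_{p,k}\Bigr)[\del\del\Psi]_{p,k}+\bigl(t^{-1}+|\del H|\bigr)[\del\Psi]_{p+1,k+1}+t^{-1}[\Psi]_{p+1,k+1},
\]
and substituting the just--proved Sobolev bound~\eqref{eq10-03-oct-2025} into the right--hand side (applied to $\del\del\Psi$, $\del\Psi$ and $\Psi$, with appropriate orders) produces exactly the bound~\eqref{eq3-19-aout-2025}, and analogously~\eqref{eq7-04-oct-2025} with the loss of one extra derivative. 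The main obstacle I anticipate is the bookkeeping of orders and ranks, in particular checking that the compositions of commutator estimates with Proposition~\ref{prop1-17-aout-2025} do not produce terms that exceed the permitted $\Ebf_{\kappa}^{\ME,p+3,k+3}$ or $\Ebf_{\kappa}^{\ME,p+4,k+4}$ energies, and that the cubic remainder contributions $\Hcom_{\cdot}$ and $\Hwave_{\cdot}$ from Proposition~\ref{prop1-19-july-2025} are absorbed by the smallness hypotheses~\eqref{eq-UK-condition} and the weight factors $|\del H|$, $t^{-1}$ already present on the right--hand side.
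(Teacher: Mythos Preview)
Your overall plan—reduce to Theorem~\ref{thm2-05-april-2025} applied to $\mathscr{Z}^I\Psi$, and for the last two items feed in Proposition~\ref{prop1-17-aout-2025}—matches the paper, and your treatment of \eqref{eq2-19-aout-2025} and \eqref{eq3-19-aout-2025} is essentially correct. But there is a genuine gap in how you handle the $j\ge 1$ step of \eqref{eq1-19-aout-2025}, and a related omission for \eqref{eq10-03-oct-2025} and \eqref{eq7-04-oct-2025}.

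For \eqref{eq1-19-aout-2025} with $j\ge 1$, you propose to ``write any admissible operator of type $(p-j,k-j)$ as a linear combination of operators of type $(p,k)$ with coefficients bounded by $(s/t)^{-j}$''. This is not possible: a differential operator of lower order cannot be expressed pointwise as a combination of higher-order ones. The frame identity $\del_a=t^{-1}L_a-(x^a/t)\del_t$ lets you trade \emph{higher} rank for \emph{lower} rank plus a weight, not the reverse. The gain of $(s/t)^{-j}$ on the left-hand side genuinely requires the massive Dirac equation: the paper obtains it by applying Proposition~\ref{prop1-24-july-2025} iteratively, which gives $[\Psi]_{p-j,k-j}\lesssim \zetab^{\,j}[\Psi]_{p,k}$ and then invokes the $j=0$ case at level $(p,k)$. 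Nothing in your frame-decomposition argument can substitute for this, because without the equation there is no mechanism that bounds a lower-order norm by $\zetab$ times a higher-order one.

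The same missing ingredient appears in \eqref{eq10-03-oct-2025} and \eqref{eq7-04-oct-2025}. The Sobolev inequality \eqref{eq5-18-aout-2025} carries $\zeta^{1/2+2\delta}$ on the left and costs two derivatives; the target \eqref{eq10-03-oct-2025} has $\zeta^{-1/2+2\delta}$ and costs three. The extra factor $\zeta^{-1}$ and the extra derivative do not come from commutator bookkeeping as you suggest, but from one application of Proposition~\ref{prop1-24-july-2025}. Likewise, \eqref{eq7-04-oct-2025} is obtained in the paper by combining \eqref{eq3-19-aout-2025} with Proposition~\ref{prop1-24-july-2025}, not by a direct repetition of the argument for \eqref{eq3-19-aout-2025}. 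Once you insert Proposition~\ref{prop1-24-july-2025} at these three places, your sketch becomes the paper's proof.
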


%----------------------------------------------------

The following statement concerns the remainders $\Phi^I$ and $\Phi_{\mu}^I$ defined in~\eqref{eq10a-18-july-2025}, and is given in Section~\ref{section-prop1-14-aout-2025}.
\begin{proposition}[General order estimates for commutators]\label{prop1-14-aout-2025}
Consider sufficiently regular spinor fields $\Psi$ defined in the slab $\Mcal_{[s_0,s_1]}$.
For any $\mathscr{Z}^I$ an admissible operator of type $(p,k)$, the remainders $\Phi^I$ and $\Phi_{\mu}^I$ satisfy 
\be\label{eq9b-18-july-2025}
\aligned
\zetab \, |\Phi_{\mu}^I|_{\vec{n}}
& \lesssim_p  \sum_{p_1+p_2\leq p\atop k_1+k_2\leq k}
[\del\del\Psi]_{p_1-1,k_1-1}|H|_{p_2+1,k_2+1}
+ \sum_{p_1+p_2\leq p\atop k_1+k_2\leq k}[\del\Psi]_{p_1,k_1}|\del H|_{p_2,k_2}
\\
& \quad +\Hcom_p[\Psi] + \Hwave_p[\Psi],
\endaligned
\ee
where the first term vanishes when $k=0$. On the other hand, one has 
\begin{equation}\label{eq10b-18-july-2025}
\aligned
\zetab \, |\Phi^I|_{\vec{n}} & \lesssim_p  
\sum_{p_1+p_2\leq p\atop k_1+k_2\leq k}[\del\Psi]_{p_1-1,k_1-1}|H|_{p_2+1,k_2+1} 
+\sum_{p_1+p_2\leq p\atop k_1+k_2\leq k}[\del\Psi]_{p_1-1,k_1}|\del H|_{p_2,k_2}
\\
& \quad + \Hcom_{p-1}[\Psi] 
+ \Hwave_{p-1}[\Psi],
\endaligned
\end{equation}
where the first sum vanishes when $k_1=0$ and the second sum vanishes when $p_1=k_1$. 

In the near-light-cone region $\Mcal^{\near}_{[s_0,s_1]} = \Mcal^{\ME}_{[s_0,s_1]}\cap\{r\leq 4t/3\}$, one has
\begin{equation}\label{eq1-14-oct-2025}
\aligned
\zetab\big|\big\la \Phi,\Phi^I_{\mu}\big\ra_{\ourD}\big|
& \lesssim_p
|\Phi|_{\vec{n}}\hspace{-0.3cm}\sum_{p_1+p_2\leq p\atop k_1+k_2\leq k}
[\del\del\Psi]_{p_1-1,k_1-1}|H|_{p_2+1,k_2+1}
\\
& \quad+|\Phi|_{\vec{n}}\hspace{-0.3cm}\sum_{p_1+p_2\leq p\atop k_1+k_2\leq k}
\big([\del\Psi]_{p_1,k_1} + [\Psi]_{p_1,k_1}\big)
\Big(\frac{\la r-t\ra}{r}|\del H|_{p_2,k_2} + |\delsN H|_{p_2,k_2}\Big)
\\
& \quad+\zetab|\Phi|_{\vec{\gamma}}\sum_{p_1+p_2\leq p\atop k_1+k_2\leq k}
[\Psi]_{\vec{\gamma},p_1,k_1}|\del H|_{p_2,k_2}
+t^{-1}|\Phi|_{\vec{n}}\hspace{-0.3cm}\sum_{p_1+p_2\leq p\atop k_1+k_2\leq k}
[\del\Psi]_{p_1,k_1+1}|\del H|_{p_2,k_2}
\\
& \quad+|\Phi|_{\vec{n}}\sum_{p_1+p_2\leq p\atop k_1+k_2\leq k}[\Psihat]_{p_1,k_1}|\del H|_{p_2,k_2}
\\
& \quad+ 
|\Phi|_{\vec{n}}\big(\Hcom_p[\Psi] + \Hwave_p[\Psi]\big).
\endaligned
\end{equation}
\begin{equation}\label{eq2-14-oct-2025}
\aligned
\zetab\big|\big\la \Phi,\Phi^I\big\ra_{\ourD}\big|
& \lesssim_p   
|\Phi|_{\vec{n}}\hspace{-0.3cm}\sum_{p_1+p_2\leq p\atop k_1+k_2\leq k}
[\del\Psi]_{p_1-1,k_1-1}|H|_{p_2+1,k_2+1}
\\
& \quad+|\Phi|_{\vec{n}}\hspace{-0.3cm}\sum_{p_1+p_2\leq p\atop k_1+k_2\leq k}
\big([\del\Psi]_{p_1-1,k_1} + [\Psi]_{p_1-1,k_1}\big)
\Big(\frac{\la r-t\ra}{r}|\del H|_{p_2,k_2} + |\delsN H|_{p_2,k_2}\Big)
\\
& \quad+\zetab|\Phi|_{\vec{\gamma}}\sum_{p_1+p_2\leq p\atop k_1+k_2\leq k}
[\Psi]_{\vec{\gamma},p_1-1,k_1}|\del H|_{p_2,k_2}
+t^{-1}|\Phi|_{\vec{n}}\hspace{-0.3cm}\sum_{p_1+p_2\leq p\atop k_1+k_2\leq k}
[\Psi]_{p_1,k_1+1}|\del H|_{p_2,k_2}
\\
& \quad+|\Phi|_{\vec{n}}\big(\Cubic^+_{p-1}[\Psi] + \Cubic^+_{p-1}[\del\Psi] + \Hwave_{p-1}[\Psi]\big).
\endaligned
\end{equation}
Let $\mathscr{Z}^J$be an admissible operator, $\ord(J) = \rank(J) = p$. Then
\begin{equation}\label{eq4-09-oct-2025}
\zetab|\Phi^J| \lesssim_p \sum_{p_1+p_2\leq p\atop k_1+k_2\leq p}[\del\Psi]_{p_1-1,k_1-1}|H|_{p_2+1,k_2+1}  + \Hcom_{p-1}[\Psi] 
+ \Hwave_{p-1}[\Psi].
\end{equation}
In all the above cases, the right-hand sides $\Hcom_{p}$ and $\Hwave_{p}$ enjoy 
\bse
\begin{equation}
\big|\Hcom_{p}[\Psi]\big|_{\vec{n}}
\lesssim_p \sum_{p_1+p_2+p_3\leq p}[\Psi]_{p_1}|H|_{p_2+1}|\del H|_{p_3}, 
\end{equation}
\begin{equation}
\big|\Hwave_{p}[\Psi]\big|
\lesssim_p\sum_{p_1+p_2\leq p}[\Psi]_{p_1}|W|_{p_2+1}
+ \sum_{p_1+p_2+p_3\leq p}[\Psi]_{p_1}|W|_{p_2+1}|H|_{p_3}.
\end{equation} 
\ese
\end{proposition}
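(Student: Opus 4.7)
The central observation is that since $\Psi$ satisfies $\opDirac\Psi+\mathrm{i}M\Psi=0$ and the adapted action $\widehat Z$ commutes with multiplication by the constant $\mathrm{i}M$, the identity propagates to every admissible operator: $\mathscr{Z}^I(\opDirac\Psi)+\mathrm{i}M\,\mathscr{Z}^I\Psi=0$. Consequently,
$$\Phi^I=[\opDirac,\mathscr{Z}^I]\Psi,\qquad \Phi^I_\mu=[\opDirac,\mathscr{Z}^I\widehat{\del_\mu}]\Psi,$$
and the whole proof reduces to a translation of the commutator estimates from Propositions~\ref{prop1-19-july-2025} and \ref{prop1-14-oct-2025} and Lemma~\ref{lem1-09-oct-2025} into the $\Phi^I/\Phi^I_\mu$ language.

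Estimate \eqref{eq10b-18-july-2025} is then an immediate consequence of Proposition~\ref{prop1-19-july-2025} applied to $[\mathscr{Z}^I,\opDirac]\Psi$, and \eqref{eq4-09-oct-2025} likewise follows from Lemma~\ref{lem1-09-oct-2025}. For \eqref{eq9b-18-july-2025}, the key algebraic input will be the Leibniz rule for commutators,
$$[\opDirac,\mathscr{Z}^I\widehat{\del_\mu}]=[\opDirac,\mathscr{Z}^I]\widehat{\del_\mu}+\mathscr{Z}^I[\opDirac,\widehat{\del_\mu}].$$
The first summand, $[\opDirac,\mathscr{Z}^I]\widehat{\del_\mu}\Psi$, is controlled by Proposition~\ref{prop1-19-july-2025} applied with $\Psi$ replaced by $\widehat{\del_\mu}\Psi$, which automatically converts the generic $[\del\Psi]$ factors appearing there into the $[\del\del\Psi]$ factors required in the right-hand side of \eqref{eq9b-18-july-2025}. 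The second summand, $\mathscr{Z}^I[\opDirac,\widehat{\del_\mu}]\Psi$, is estimated by combining the single-commutator identity \eqref{eq12-18-july-2025} with the product rules of Proposition~\ref{prop1-15-july-2025}, and will produce precisely the term $[\del\Psi]_{p_1,k_1}|\del H|_{p_2,k_2}$ in \eqref{eq9b-18-july-2025}. The cubic remainders $\Hcom_p[\Psi]$ and $\Hwave_p[\Psi]$ absorb the corresponding contributions from both summands.

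For the near-light-cone estimates \eqref{eq1-14-oct-2025} and \eqref{eq2-14-oct-2025}, the plan is identical in structure: pair the test spinor $\Phi$ against $[\opDirac,\mathscr{Z}^I\widehat{\del_\mu}]\Psi$ and $[\opDirac,\mathscr{Z}^I]\Psi$ respectively under the Dirac form and invoke Proposition~\ref{prop1-14-oct-2025}, using the Leibniz decomposition above to treat the $\Phi^I_\mu$ case. The $|\Phi|_{\vec\gamma}$ contribution arises from the decomposition \eqref{eq6-06-oct-2025}, namely $\mathrm{grad}(r-t)=\vec V+H^{\alpha\beta}\del_\alpha(r-t)\del_\beta$, already exploited in Lemma~\ref{lem5-06-oct-2025} and Proposition~\ref{prop1-06-oct-2025}, which converts the ``bad'' $\vec V$-direction into the good $\vec\gamma$-direction under the light-bending hypothesis $H^{\Ncal00}<0$. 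The specific contribution $[\Psihat]_{p_1,k_1}|\del H|_{p_2,k_2}$ in \eqref{eq1-14-oct-2025} is preserved as a separate term precisely because, by \eqref{eq1-13-oct-2025}, it carries the improved $t^{-1}$ factor associated with $\Psihat_a=t^{-1}\widehat{L_a}\Psi$; absorbing it into the generic $[\del\Psi]$ sum would destroy this refined decay.

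The only delicate point will be the bookkeeping of the index shifts and the careful handling of the $[\del\del\Psi]$ versus $[\del\Psi]$ distinction in \eqref{eq9b-18-july-2025} and \eqref{eq1-14-oct-2025}, together with isolating the $\Psihat_a$ structure so that the refined decay \eqref{eq1-13-oct-2025} survives. No new commutator identity is required; every analytic ingredient is furnished by the commutator propositions already established in Sections~\ref{section=N8} and \ref{section=N9}.
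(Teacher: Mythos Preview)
Your proposal is correct and follows essentially the same approach as the paper: the paper also observes that $\Phi^I=-[\mathscr{Z}^I,\opDirac]\Psi$ and $\Phi_\mu^I=-[\mathscr{Z}^I,\opDirac]\widehat{\del_\mu}\Psi-\mathscr{Z}^I([\widehat{\del_\mu},\opDirac]\Psi)$, then invokes Proposition~\ref{prop1-19-july-2025} (resp.\ Lemma~\ref{lem1-09-oct-2025}) for \eqref{eq10b-18-july-2025} (resp.\ \eqref{eq4-09-oct-2025}), combines Proposition~\ref{prop1-19-july-2025} with \eqref{eq12-18-july-2025} for \eqref{eq9b-18-july-2025}, and appeals to Lemma~\ref{lem1-13-oct-2025} and Proposition~\ref{prop1-14-oct-2025} for the near-light-cone estimates. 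Your identification of the index-shift bookkeeping and the role of the $\Psihat$ term is accurate.
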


%----------------------------------------------------

The proof of the following statement is given in Section~\ref{section-prop-spin-orthgonal}. It relies on the technique of integration that we developed for the massive Dirac equation.  

\begin{proposition}[General order pointwise estimates along orthogonal curves]
\label{prop-spin-orthgonal}
Consider sufficiently regular solutions $\Psi$ defined in $\Mcal^{\Hcal}_{[s_0,s_1]}$ to the equation
\begin{equation}\label{eq12-26-aout-2025}
\opDirac\mathscr{Z}^I\Psi + \mathrm{i}M\mathscr{Z}^I\Psi = \Phi^I
\end{equation}
and define the associated pointwise norm 
\be
\Abf_{p,k}[\Psi](s) := \sup_{\Mcal^{\Hcal}_{[s_0,s_1]}}
s^{3/2}\big((s/t)[\del\Psi]_{p,k} + \lapsb M[\Psi]_{p,k}\big).
\ee
Then for all $p\leq N-4$, one has 
\begin{equation}\label{eq11-26-aout-2025}
\aligned
\Abf_{p,k}[\Psi](s)
& \lesssim_{M,N}
(s/t)^{-3/2}\big(\lapsb M[\Psi]_{N-4} + (s/t)^{-1}|\del H|[\Psi]_{N-3} + \Hcom_{p-1}[\Psi]\big)_{\tau_{t,x}(s_{t,x}^*)} 
\\
& \quad+ (s/t)^{-1/2}\big([\Psi]_{N-3}\big)_{\tau_{t,x}(s_{t,x}^*)}   
+s^{1/2}[\Psi]_{N-3} + \zetab^{-1}\Hcom_{p-1}[\Psi]
\\
& \quad + \sum_{p_1+p_2\leq p\atop k_1+k_2\leq k}\int_{s_{t,x}^*}^s 
\Abf_{p_1,k_1}[\Psi](\lambda)\big(\lapsb|LH|_{p_2-1,k_2-1}\big)\big|_{t,x}(\lambda)\diff \lambda
\\
& \quad +\int_{s_{t,x}^*}^s  \Abf_{p,k}[\Psi](\lambda)I_N[H,W]\big|_{t,x}(\lambda) \diff \lambda
+\int_{s_{t,x}^*}^s J_N[H,W,\Psi]\big|_{t,x}(\lambda) \diff \lambda, 
\endaligned
\end{equation}
in which the source contributions $I_N[H,W]$ and $J_N[H,W,\Psi]$ satisfy 
\bse
\begin{equation}
\aligned
I_N[H,W]  & :=  s^{-1}|H|_{N-2} + |\del H|_{N-3} + \zetab^{-1}|\del\Hu^{00}|_{N-3}
\\ & \quad 
+ (s/t)^{-2}\hspace{-0.3cm}\sum_{p_1+p_2\leq N-4}\hspace{-0.4cm}|H|_{p_1+1}|H|_{p_2+1}
+ \big((s/t)^{-1}\lapsb + \zetab^{-1} |\del H|\big)\lapsb|W|,
\endaligned
\end{equation}
\be
\aligned
J_N[H,W,\Psi] & := s^{3/2}\lapsb^2\big(t^{-1} + |\del H|_{N-4}\big)|H|_{N-3}[\del\Psi]_{N-3} 
\\
& \quad + s^{3/2}\lapsb^2\big(|\del W|_{N-3} + |W|_{N-3} + (s/t)^{-1}|\del H|_{N-4}|H|_{N-3}\big)[\Psi]_{N-3}
\\
& \quad + s^{3/2}\lapsb^2\big(s^{-2} + s^{-1}(s/t)^{-1}|\del H|_1\big)[\Psi]_{N-2}
\\
& \quad +s^{3/2}\lapsb^2(s/t)^{-1}\big(t^{-1} + |\del H|\big)|H|_1[\Psi]_{N-2}\sum_{k=0}^2\big((s/t)^{-2}|H|\big)^k.
\endaligned
\end{equation}
\ese
\end{proposition}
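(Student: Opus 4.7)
The plan is to apply Proposition~\ref{prop2-14-aout-2025} componentwise to the commuted Dirac system. For each admissible operator $\mathscr{Z}^I$ of type $(p,k)$, the spinor $\mathscr{Z}^I\Psi$ satisfies $\opDirac(\mathscr{Z}^I\Psi) + \mathrm{i}M\mathscr{Z}^I\Psi = \Phi^I$, so Proposition~\ref{prop2-14-aout-2025} yields a pointwise bound on $s^{3/2}\bigl((s/t)\sum_\alpha |\widehat{\del_\alpha}\mathscr{Z}^I\Psi|_{\vec n} + \lapsb M|\mathscr{Z}^I\Psi|_{\vec n}\bigr)$ by (i) the value at $\gamma_{t,x}(s_{t,x}^*)$ of $|\nabla_{\vec L}(s^{3/2}\mathscr{Z}^I\Psi)|_{\vec n} + \lapsb M|s^{3/2}\mathscr{Z}^I\Psi|_{\vec n}$, (ii) an integral $\int_{s_{t,x}^*}^s |F|_{\vec n,t,x}(\lambda)\,d\lambda$ with $F = \lapsb^2 s^{3/2}\bigl((\opDirac-\mathrm{i}M)\Phi^I + R[H,\mathscr{Z}^I\Psi] - \nabla_W\mathscr{Z}^I\Psi\bigr)$, and (iii) the three ``lower-order'' correctors $s^{1/2}\lapsb\bigl((s/t)[\mathscr{Z}^I\Psi]_{1,1}+\lapsb|\mathscr{Z}^I\Psi|_{\vec n}\bigr)$, $s^{1/2}[\mathscr{Z}^I\Psi]_{1,1}$, and $s^{3/2}\zetab^{-1}\lapsb|\del H|\,|\mathscr{Z}^I\Psi|_{\vec n}$. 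Taking the supremum over $\ord(I)\le p$, $\rank(I)\le k$ and over $\Mcal^{\Hcal}_{[s_0,s_1]}$ puts $\Abf_{p,k}[\Psi](s)$ on the left and it remains to reorganize the right-hand side into the four declared groups.

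The source integral is where the bulk of the work lies. Using the commutator bound \eqref{eq10b-18-july-2025} from Proposition~\ref{prop1-14-aout-2025}, we expand $\zetab\,|\Phi^I|_{\vec n}$ as a sum of bilinear terms $[\del\Psi]_{p_1-1,k_1-1}|H|_{p_2+1,k_2+1}$ and $[\del\Psi]_{p_1-1,k_1}|\del H|_{p_2,k_2}$, plus the cubic/wave remainders $\Hcom_{p-1}[\Psi]+\Hwave_{p-1}[\Psi]$. The top-order contributions, those in which $[\del\Psi]$ carries the full index $(p,k)$ and $H$ appears with one boost/rotation applied (so the coefficient is of type $|LH|_{p_2-1,k_2-1}$), are precisely the ones whose integrands, after multiplication by $s^{3/2}\lapsb^2\zetab^{-1}$ and after rewriting $s^{3/2}(s/t)[\del\Psi]_{p_1,k_1}\le \Abf_{p_1,k_1}[\Psi]$, become $\Abf_{p_1,k_1}[\Psi](\lambda)\cdot\lapsb|LH|_{p_2-1,k_2-1}|_{t,x}(\lambda)$. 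Contributions with at most one order of $\Psi$ lost to derivatives of $H$, combined with the $\lapsb M$ part of $\Abf$ as controlled by Lemma~\ref{lem1-26-july-2025}, produce the coefficient $I_N[H,W]$ that multiplies $\Abf_{p,k}[\Psi](\lambda)$; all remaining pieces, including $\nabla_W$, $\Hwave_{p-1}[\Psi]$, and lower-order products where $[\Psi]$ or $[\del\Psi]$ appear at order $\le N-3$, are dumped into $J_N[H,W,\Psi]$ via the Sobolev hierarchy $[\Psi]_{p}\le[\Psi]_{N-3}$ for $p\le N-4$.

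The boundary value at $\gamma_{t,x}(s_{t,x}^*)$ must also be recast. Using \eqref{eq8-25-july-2025} to convert $\nabla_{\vec L}(s^{3/2}\mathscr{Z}^I\Psi)$ at the base point into $(s/t)\widehat{\del_t}\mathscr{Z}^I\Psi$ modulo $[\Psi]$, $[\del\Psi]$, and $\Hcom$-type errors, and exploiting that along the base set $\mathcal{I}_{[s_0,s_1]}$ we may trade one further Lie derivative of $\Psi$ for $[\Psi]_{N-3}$, the ``data'' contribution becomes $(s/t)^{-3/2}\bigl(\lapsb M[\Psi]_{N-4}+(s/t)^{-1}|\del H|[\Psi]_{N-3}+\Hcom_{p-1}[\Psi]\bigr)$ together with $(s/t)^{-1/2}[\Psi]_{N-3}$ evaluated at $\gamma_{t,x}(s_{t,x}^*)$, which is exactly the leading line of \eqref{eq11-26-aout-2025}. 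The three auxiliary correctors supplied by Proposition~\ref{prop2-14-aout-2025} at the current point collapse into $s^{1/2}[\Psi]_{N-3} + \zetab^{-1}\Hcom_{p-1}[\Psi]$ once we use $\lapsb\lesssim\zeta/\zetab\lesssim 1$ and the index bound $p\le N-4$.

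The main obstacle will be the bookkeeping of the quadratic term $[\del\Psi]_{p_1-1,k_1-1}|H|_{p_2+1,k_2+1}$ for $(p_1,k_1)=(p,k)$: one must verify that the single $\vec L$-integration along $\gamma_{t,x}$ converts each factor $\lapsb|LH|$ into something of integrable size against the measure $d\lambda$ using Proposition~\ref{lem2-24-july-2025} and the ray-decay estimate \eqref{eq6-19-june-2025}, and that the Clifford-gauge correction from the modified derivative $\widehat{\del_\alpha}$ versus $\nabla_{\del_\alpha}$ does not spoil the top-order identification. A secondary difficulty is the interaction of the weights $\lapsb^2\zetab^{-1}$, which must be bounded by $\zeta/\zetab$ uniformly through Claim~\ref{cor1-16-june-2025}; together with the factor $(s/t)^{-1}$ traded for $\lapsb^{-1}$, this ensures that each term in the commuted source is either of the Grönwall form $\Abf\cdot\lapsb|LH|$ or of genuine lower order, absorbable into $I_N$ or $J_N$.
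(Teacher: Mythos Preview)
There is a genuine gap in your handling of the source integral. You correctly identify that $F = \lapsb^2 s^{3/2}\bigl((\opDirac-\mathrm{i}M)\Phi^I + R[H,\mathscr{Z}^I\Psi] - \nabla_W\mathscr{Z}^I\Psi\bigr)$, but then you propose to ``expand $\zetab\,|\Phi^I|_{\vec n}$'' using the commutator bound \eqref{eq10b-18-july-2025}. That bound controls $|\Phi^I|_{\vec n}$, \emph{not} $|(\opDirac-\mathrm{i}M)\Phi^I|_{\vec n}$, which is the quantity actually appearing in $F$. The pointwise estimate in Proposition~\ref{prop2-14-aout-2025} is built on the squared-Dirac identity $-\Box_g\Psi + M^2\Psi = \opDirac\Phi - \mathrm{i}M\Phi + \tfrac14 R_g\Psi$, so the source term genuinely involves one further Dirac operator applied to the commutator.

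The correct input is Proposition~\ref{lem1-22-aout-2025}, which bounds $|\opDirac\Phi^I - \mathrm{i}M\Phi^I|_{\vec n}$ and is exactly what produces the hierarchy term $\sum [\Psi]_{p_1,k_1}\,|LH|_{p_2-1,k_2-1}$ (note: $[\Psi]$, not $[\del\Psi]$), the coefficient $T_N[H]$ that becomes part of $I_N[H,W]$, and the source term $S_{N,p}$ that feeds into $J_N$. This is a substantial result whose proof occupies all of Section~\ref{section=N12}: it requires the quadratic--cubic decomposition of $\opDirac([\widehat Z,\opDirac]\Psi)-\mathrm{i}M[\widehat Z,\opDirac]\Psi$ (Proposition~\ref{prop-quad-cubic}), a semi-hyperboloidal null-structure analysis of $\underline{\pi[Z]}^{00}$ (Proposition~\ref{prop1-24-aout-2025}), and an induction on $|I|$. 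Without this ingredient you cannot recover the specific structure of $I_N$ and $J_N$, nor the crucial fact that the Gr\"onwall coefficient $\lapsb|LH|_{p_2-1,k_2-1}$ vanishes when $k_2=0$. Your paragraph on ``the main obstacle'' anticipates weight bookkeeping, but the real missing idea is that the double-commutator $(\opDirac-\mathrm{i}M)\Phi^I$ must be analyzed separately and is not reducible to $|\Phi^I|$.
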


%---------------------------------------------

\paragraph{Observation on the hierarchy structure.}

In the bootstrap mechanism, we will establish the integral bounds on $I_N[H,W]$ and $J_N[H,W,\Psi]$, namely for some $\eps>0$  
\begin{equation}
\int_{s_{t,x}^*}^sI_N[H,W]|_{t,x}(\lambda)\diff \lambda\lesssim \eps(s/t)^{\alpha},\quad \alpha\geq 0,
\end{equation}
\begin{equation}
\int_{s_{t,x}^*}^sJ_N[H,W]|_{t,x}(\lambda)\diff \lambda\lesssim \eps(s/t)^{\beta},\quad \beta\geq 0,
\end{equation}
as well as the pointwise bound on the spinor field
\begin{equation}
s^{1/2}[\Psi]_{N-3} + \zetab^{-1}\Hcom_{p-1}[\Psi]\lesssim \eps(s/t)^{\beta}. 
\end{equation}
Here, $\eps$ is naturally chosen to measure the smallness of the prescribed initial data. And with the estimates in the merging-Euclidean domain, we will show that
\begin{equation}
\aligned
(s/t)^{-3/2}\big(\lapsb M[\Psi]_{N-4} & \quad + (s/t)^{-1}|\del H|[\Psi]_{N-3} + \Hcom_{p-1}[\Psi]\big)_{\tau_{t,x}(s_{t,x}^*)} 
\\
& \quad + (s/t)^{-1/2}\big([\Psi]_{N-3}\big)_{\tau_{t,x}(s_{t,x}^*)} 
\lesssim \eps(s/t)^{\beta}.
\endaligned
\end{equation}
Here we should observe
\begin{equation}
t^{-1}\simeq (s/t)^2 \qquad \text{ at } \tau_{t,x}(s_{t,x}^*). 
\end{equation}
At this stage, we can apply to~\eqref{eq11-26-aout-2025}  the following version of Gronwall's inequality: 
\be
0 < u(t)\leq \alpha(t) + \int_{t_0}^t\beta(s)u(s)\diff s
\quad\Rightarrow\quad
u(t)\leq \alpha(t) + \int_{t_0}^t\alpha(s)\beta(s)\exp\big(\int_s^t\beta(r)\diff r\big)\diff s. 
\ee
In turn, we conclude that
\be
\Abf_{p,k}[\Psi](s)\lesssim_{M,N} (s/t)^{\beta}\eps + \sum_{p_1+p_2\leq p\atop k_1+k_2\leq k}\int_{s_{t,x}^*}^s 
\Abf_{p_1,k_1}[\Psi](\lambda)\big(\lapsb|LH|_{p_2-1,k_2-1}\big)\big|_{t,x}(\lambda)\diff \lambda.
\ee
The key structure is that, the \textit{sum does not exist when $k=0$}, while, when $k>0$, it only involves the term $\Abf$ at the rank $k-1$. This \textit{hierarchy structure} allows us to apply an induction on $k$, and arrive at the crucial sharp decay estimates.

}

%--------------------------------------------------------------------------------------------------------------------------------------

\subsection{ Proof of Proposition~\ref{prop-spin-energy}}
\label{section-prop-spin-energy}

{ 

\paragraph{Estimate of the bulk term.}

Recall that $\omega = \aleph(r-t)$ defined in~\eqref{equa-weight1} and let us use the weight $\omega^{2\kappa}$. 
%--------------------------------------------------------

\paragraph{Proof of Proposition~\ref{prop-spin-energy}.}

\bse
We now apply Proposition~\ref{prop1-04-april-2025} with the weight $w = \omega^{\kappa}+1$ 
and after differentiating~\eqref{eq1a-05-april-2025} with respect to the variable $s_1$, we obtain
\begin{equation}\label{eq2-18-aout-2025}
\frac{\diff}{\diff s}\Ebf_{g,w}(s,\Psi) 
+ \int_{\Mcal_s}\la \Psi,\mathrm{grad}\big(\omega^{2\kappa}\big)\cdot\Psi \ra_{\ourD} \lapsb\mathrm{Vol}_{\sigmab} 
= -2\int_{\Mcal_s}\omega\Re\big(\la\Phi,\Psi\ra_{\ourD}\big)\lapsb\mathrm{Vol}_{\sigmab}.
\end{equation}
Then thanks to Lemma~\ref{lem1-18-aout-2025} and Corollary~\ref{cor1-18-aout-2025}, we find 
\be
\aligned
\Big|\frac{\diff}{\diff s}\Ebf_{g,w}(s,\Psi)\Big|
& \lesssim  \int_{\Mcal_s}\big|w\Re\big(\la\Phi,\Psi\ra_{\ourD}\big)\big|\lapsb\mathrm{Vol}_{\sigmab}
\lesssim\int_{\Mcal_s}w|\Phi|_{\vec{n}}|\Psi|_{\vec{n}}\lapsb\mathrm{Vol}_{\sigmab}
\\
& \lesssim  \Big(\int_{\Mcal_s}w|\Psi|_{\vec{n}}^2\mathrm{Vol}_{\sigmab}\Big)^{1/2}
\Big(\int_{\Mcal_s}w \lapsb^2|\Phi|_{\vec{n}}^2\mathrm{Vol}_{\sigmab}\Big)^{1/2}
\\
& \lesssim  \Ebf_{g,w}(s,\Psi)^{1/2}
\Big(\int_{\Mcal_s}w \lapsb^2|\Phi|_{\vec{n}}^2\mathrm{Vol}_{\sigmab}\Big)^{1/2}.
\endaligned
\ee
Thanks to Claim~\ref{lem1-02-march-2025}, this provides us with 
\be
\frac{\diff }{\diff s}\Ebf_{g,w}(s,\Psi)^{1/2}
\lesssim \big\|w^{1/2}\lapsb|\Phi|_{\vec{n}}\big\|_{L^2_{\sigmab}(\Mcal_s)}
\lesssim \big\|w^{1/2}\lapsb\,\zetab^{1/2}|\Phi|_{\vec{n}}\big\|_{L^2(\Mcal_s)}. 
\ee
Next, we apply Claim~\ref{cor1-16-june-2025} and observe that $\lapsb_{\eta} = J\zeta^{-1}$, hence 
\be
\frac{\diff }{\diff s}\Ebf_{g,w}(s,\Psi)^{1/2}
\lesssim \big\|w^{1/2}J\zetab^{-1/2}|\Phi|_{\vec{n}}\big\|_{L^2(\Mcal_s)}.
\ee
\ese
Integrating this inequality on the time interval $[s_0,s_1]$, we obtain
\bel{eq4-18-aout-2025}
\Ebf_{g,w}(s_1,\Psi)^{1/2}\leq \Ebf_{g,w}(s_0,\Psi)^{1/2} 
+ C\int_{s_0}^{s_1}\big\|w^{1/2}J\zetab^{-1/2}|\Phi|_{\vec{n}}\big\|_{L^2(\Mcal_s)}\diff s, 
\ee
where $C>0$ a universal constant. It remains to apply~\eqref{eq4-18-aout-2025} to~\eqref{eq10a-18-july-2025} for each field $\mathscr{Z}^I$ of type $(p,k)$. This completes the derivation of the desired estimates. 

Finally, the estimate \eqref{eq5-09-oct-2025} is a direct consequence of Proposition~\ref{prop1-05-oct-2025} combined with the notation \eqref{eq10a-18-july-2025}.

%}

}

%-----------------------------------------------------------------------------------------------------------------------------

\subsection{ Proof of Proposition~\ref{prop-spin-Sobolev}} 
\label{section-prop-spin-Sobolev}

{ 

\paragraph{Proof of~\eqref{eq1-19-aout-2025}.}

In Proposition~\ref{prop-spin-Sobolev}, we have three inequalities to establish. We begin with~\eqref{eq1-19-aout-2025}, and combine~\eqref{eq9-16-june-2025} with Proposition~\ref{prop1-24-july-2025}. Indeed, applying~\eqref{eq9-16-june-2025} to the function $\mathscr{Z}^I\Psi$ where $\mathscr{Z}^I$ is of type $(p,k)$, we obtain
\be
(s/t)^{1/2+2\delta}t^{3/2}|\mathscr{Z}^I\Psi|_{\vec{n}}\lesssim \sum_{|J|\leq 2}\|(s/t)^{1/2}\mathscr{L}^J(\mathscr{Z}^I\Psi)\|_{L^2(\Mcal_s)}
\lesssim \Ebf_{\kappa}^{p+2,k+2}(s,\Psi)^{1/2}.
\ee
This leads us to the estimate on $[\Psi]_{p,k}$ corresponding to~\eqref{eq1-19-aout-2025} with $j=0$. Finally, we apply Proposition~\ref{prop1-24-july-2025} inductively ($j$-times) to $[\Psi]_{p-j,k-j}$.

\paragraph{Proof of~\eqref{eq10-03-oct-2025}.} 

This is direct from \eqref{eq11-19-aout-2025} combined with Proposition~\ref{prop1-24-july-2025}.

\paragraph{Proof of~\eqref{eq2-19-aout-2025}.} 

The argument of proof is based on the Sobolev inequality~\eqref{eq11-19-aout-2025}. We only need to point out that, in the region $r\geq 3t$, we have $r\lesssim 2+r-t$ and $\zetab^{-1}\lesssim 1$.

%----------------------------------------------------------

\paragraph{Proof of~\eqref{eq3-19-aout-2025}.} 

We now combine~\eqref{eq5-18-aout-2025} with Proposition~\ref{prop1-17-aout-2025}. We multiply~\eqref{eq12-19-aout-2025} by $\zeta^{1/2+2\delta}(2+r-t)^{\kappa}(1+r)$. We then observe that
\be
\aligned
\zeta^{1/2+2\delta}(2+r-t)^{\kappa}(1+r)[\del\del\Psi]_{p,k}
& \lesssim_{\delta}  \|(2+r-t)^{\kappa}\zeta^{1/2}[\del\del\Psi]_{p+2,k+2}\|_{L^2(\Mcal^{\EM})}
\\
& \lesssim  \Ebf_{\kappa}^{p+2,k+2}(s,\del\del\Psi)^{1/2}.
\endaligned
\ee
For the terms $[\Psi]_{p+1,k+1}$ and $[\del\Psi]_{p+1,k+1}$, we perform the same estimates, and we have thus established~\eqref{eq3-19-aout-2025}. This completes the proof of Proposition~\ref{prop-spin-Sobolev}. 

%----------------------------------------------------------------

\paragraph{Proof of~\eqref{eq7-04-oct-2025}.} 

We only need to combine Proposition~\ref{prop1-24-july-2025} with \eqref{eq3-19-aout-2025}.

}

%--------------------------------------------------------------------------------------------------------------------------

\subsection{ Proof of Proposition~\ref{prop1-14-aout-2025}}
\label{section-prop1-14-aout-2025}

{ 
\bse
Now we are ready to establish Proposition~\ref{prop1-14-aout-2025}. For~\eqref{eq9b-18-july-2025}
in the zero-order case, we begin with
\begin{equation}
\opDirac\big(\widehat{\del_{\mu}}\Psi\big) + \mathrm{i}M\big(\widehat{\del_{\mu}}\Psi\big) = -[\widehat{\del_{\mu}},\opDirac]\Psi
\end{equation}
and, after differentiation,  
\be
\opDirac\big(\mathscr{Z}^I\widehat{\del_{\mu}}\Psi\big) 
+ \mathrm{i}M\big(\mathscr{Z}^I\widehat{\del_{\mu}}\Psi\big)
= -[\mathscr{Z}^I,\opDirac]\widehat{\del_{\mu}}\Psi 
- \mathscr{Z}^I\big([\widehat{\del_{\mu}},\opDirac]\Psi\big) =:\Phi_{\mu}^I. 
\ee
In turn, in view of~\eqref{eq12-18-july-2025} together with Proposition~\ref{prop1-19-july-2025} we deduce that 
\be
\aligned
\zetab \, |\Phi_{\mu}^I|_{\vec{n}}
& \lesssim_p  \sum_{p_1+p_2\leq p\atop k_1+k_2\leq k}
[\del\del\Psi]_{p_1-1,k_1-1}|H|_{p_2+1,k_2+1}
+\sum_{p_1+p_2\leq p\atop k_1+k_2\leq k}[\del\del\Psi]_{p_1-1,k_1}|\del H|_{p_2,k_2}
\\
& \quad +\Hcom_{p-1}[\del\Psi] + \Hwave_{p-1}[\del\Psi] 
+\sum_{p_1+p_2\leq p\atop k_1+k_2\leq k} 
[\del\Psi]_{p_1,k_1}|\del H|_{p_2,k_2}
+ \Hcom_{p}[\Psi] + \Hwave_{p,k}[\del,\Psi], 
\endaligned
\ee
therefore 
\be
\aligned
\zetab \, |\Phi_{\mu}^I|_{\vec{n}}
& \lesssim_p  \sum_{p_1+p_2\leq p\atop k_1+k_2\leq k}
[\del\del\Psi]_{p_1-1,k_1-1}|H|_{p_2+1,k_2+1}
+\sum_{p_1+p_2\leq p\atop k_1+k_2\leq k}[\del\Psi]_{p_1,k_1}|\del H|_{p_2,k_2}
+\Hcom_p[\Psi] + \Hwave_p[\Psi].
\endaligned
\ee
Observe that the first term vanishes when $k=0$. This leads us to~\eqref{eq9b-18-july-2025}. 

A similar argument applies to derive \eqref{eq1-14-oct-2025}, for which we now apply Lemma~\ref{lem1-13-oct-2025} and Proposition~\ref{prop1-14-oct-2025}.
On the other hand, to deal with~\eqref{eq10b-18-july-2025}, we observe that
\be
\opDirac(\mathscr{Z}^I\Psi) + \mathrm{i}M(\mathscr{Z}^I\Psi) = -[\mathscr{Z}^I,\opDirac]\Psi 
\ee
and we apply directly Proposition~\ref{prop1-19-july-2025} and 
Lemma~\ref{lem1-09-oct-2025}
on high-order commutator estimates for the Dirac operator. 
\ese

}

%------------------------------------------------------------------------------------------------------------------------------

\subsection{ Preliminaries for the proof of Proposition~\ref{prop-spin-orthgonal}}
\label{section-prop-spin-orthgonal-prel}

{ 

\paragraph{First technical estimate.}

We now consider the Dirac equation $\opDirac\mathscr{Z}^I\Psi + \mathrm{i}M\mathscr{Z}^I\Psi = \Phi^I$ in \eqref{eq12-26-aout-2025}, and we are going to apply Proposition~\ref{lem2-24-july-2025}. We need first the following observation about the remainder arising in~\eqref{eq8-22-aout-2025}. 

\begin{lemma}
\label{lem2-22-aout-2025}
Assume the $(\eps_s,\delta,\kappa)$--flatness condition for a sufficiently small $\eps_s>0$ and ${0<\delta\leq 1}$. Then for any admissible operator $\mathscr{Z}^I$ of type $(p,k)$ with $p\leq n-4$, one has 
\begin{equation}\label{eq9-22-aout-2025}
\aligned
\big|R[H,\mathscr{Z}^I\Psi]\big|_{\vec{n}} & \lesssim_p 
s^{-2}[\Psi]_{p+2,k+2} + (s/t)^{-2}t^{-1}|\del H|_1[\Psi]_{p+2,k+2}
\\
& \quad +(s/t)^{-1}\big(t^{-1} + |\del H|\big)|H|_1[\Psi]_{p+2,k+2}\sum_{k=0}^2\big((s/t)^{-2}|H|\big)^k, 
\endaligned
\end{equation} 
where $R[H,\Psi]$ was defined earlier in~\eqref{eq8-22-aout-2025}.
\end{lemma}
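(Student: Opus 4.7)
The plan is to apply the zero--order bound \eqref{eq8-22-aout-2025} with $\Psi$ replaced by $\mathscr{Z}^I\Psi$, then to convert the resulting commuted norms $[\mathscr{Z}^I\Psi]_{j,j}$ into the high--order norms $[\Psi]_{p+j,k+j}$ (which is possible since $p+2\leq N-2$), and finally to reorganize the terms into the three expressions appearing on the right--hand side of \eqref{eq9-22-aout-2025}. Throughout the argument we work in the hyperboloidal domain, where $\zeta=s/t$ and the light--bending condition \eqref{eq-bending-condition} from the $(\eps_s,\delta,\kappa)$--flatness condition gives $\zetab\simeq s/t$, so that $\zetab^{-1}\lesssim t/s$.

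The principal term $t^{-2}\zetab^{-2}[\mathscr{Z}^I\Psi]_{2,2}$ immediately contributes $s^{-2}[\Psi]_{p+2,k+2}$ after invoking $t^{-2}\zetab^{-2}\lesssim s^{-2}$. The companion term $t^{-2}\zetab^{-3}[\mathscr{Z}^I\Psi]_{1,1}$ is folded into the same expression by using the \emph{trading lemma} (Proposition~\ref{prop1-24-july-2025}), which yields $\zetab^{-1}[\mathscr{Z}^I\Psi]_{1,1}\lesssim_M [\mathscr{Z}^I\Psi]_{2,2}\lesssim_p [\Psi]_{p+2,k+2}$. The same device is applied to each cubic contribution $t^{-1}\zetab^{-2}|H|_1[\mathscr{Z}^I\Psi]_{1,1}$, $t^{-1}\zetab^{-3}|\mathscr{Z}^I\Psi|_{\vec n}|H|_1$, and $\zetab^{-3}|\mathscr{Z}^I\Psi|_{\vec n}|\del H||H|_1$, each multiplied by the finite geometric factor $\sum_{k=0}^2((s/t)^{-2}|H|)^k$: exchanging one factor of $\zetab^{-1}$ for one additional unit of differentiation systematically produces the third summand of \eqref{eq9-22-aout-2025}. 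The $|\del H|$--type contributions $t^{-1}\zetab^{-3}|\del H|[\mathscr{Z}^I\Psi]_{1,1}$ and $t^{-1}\zetab^{-3}|\del H|_1|\mathscr{Z}^I\Psi|_{\vec n}$ are condensed to $(s/t)^{-2}t^{-1}|\del H|_1[\Psi]_{p+2,k+2}$ via one or two further applications of the same trade, producing the second summand. The residual pieces $s^{-2}(s/t)^{-2}\zetab^{-1}|\del H||H||\mathscr{Z}^I\Psi|_{\vec n}$ and $s^{-2}(s/t)^{-1}\zetab^{-2}|H||\mathscr{Z}^I\Psi|_{\vec n}$ are absorbed into the first summand by pulling out one factor of $\eps_s$ coming from \eqref{eq-UK-condition} for $|H|$ and $|\del H|$.

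The subtle point, and the main obstacle in this lemma, is the scalar--curvature term $|R_g|\,|\mathscr{Z}^I\Psi|_{\vec n}$. A purely geometric bound $|R_g|_{p,k}\lesssim \eps_s$ from \eqref{eq-UK-condition} would give only $\eps_s[\Psi]_{p,k}$, which does not fit the target right--hand side. To absorb it one must exploit the \emph{coupled} structure of the Einstein--Dirac system: taking the trace of \eqref{main system} yields $R_g=-T[\Psi]^\mu_{\;\mu}$, and the explicit form \eqref{equa-tensorTmunu-0} of the energy--momentum tensor, combined with the Dirac equation $\opDirac\Psi+\mathrm{i}M\Psi=0$, shows schematically
\begin{equation*}
|R_g|\lesssim M\,|\Psi|_{\vec n}^{\,2}+|\Psi|_{\vec n}\sum_{\alpha}|\widehat{\del_\alpha}\Psi|_{\vec n}.
\end{equation*}
Under the $(\eps_s,\delta,\kappa)$--flatness condition the low--order Sobolev decay of Proposition~\ref{prop-spin-Sobolev} (applied at order $[p/2]+1\leq N-4$) together with the mass--weighted decay of Proposition~\ref{prop1-24-july-2025} gives $[\Psi]_{[p/2]+1}+[\del\Psi]_{[p/2]+1}\lesssim \eps_s\,(s/t)^{1/2}t^{-3/2}$, and hence $|R_g|\lesssim \eps_s\,s^{-2}$ in the hyperboloidal region. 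Consequently $|R_g|\,|\mathscr{Z}^I\Psi|_{\vec n}\lesssim \eps_s\,s^{-2}[\Psi]_{p,k}\lesssim s^{-2}[\Psi]_{p+2,k+2}$, which is precisely the first summand of \eqref{eq9-22-aout-2025}. Summing the contributions from all terms in \eqref{eq8-22-aout-2025} yields the announced estimate.
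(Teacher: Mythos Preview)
Your main approach matches the paper's exactly: substitute $\mathscr{Z}^I\Psi$ into \eqref{eq8-22-aout-2025}, use Proposition~\ref{prop1-24-july-2025} to trade each factor $\zetab^{-1}$ for one extra derivative, and use $\zetab^{-1}\lesssim(s/t)^{-1}$ to simplify. One minor inaccuracy: the claim $\zetab\simeq s/t$ is too strong, since under flatness one only has $|H^{\Ncal00}|\lesssim\eps_s(s/t)$, hence $s/t\le\zetab\lesssim(s/t)^{1/2}$. But you only use the direction $\zetab^{-1}\lesssim(s/t)^{-1}$, so this is harmless.

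The scalar-curvature term is where you depart from the paper, and here there is a genuine issue. You rightly note that the crude bound $|R_g|\lesssim\eps_s$ from \eqref{eq-UK-condition} is insufficient to absorb $|R_g|\,|\mathscr{Z}^I\Psi|_{\vec n}$ into the stated right-hand side. Your proposed remedy---compute $R_g=-M\langle\Psi,\Psi\rangle_{\ourD}$ via the trace of the Einstein equation (cf.~\eqref{eq5-27-sept-2025}; note the $|\Psi|\,|\del\Psi|$ contribution you wrote is superfluous), then use Sobolev to deduce $|\Psi|_{\vec n}\lesssim\eps_s(s/t)^{1/2}t^{-3/2}$---invokes an $L^2$ energy bound on $\Psi$, which is \emph{not} part of the $(\eps_s,\delta,\kappa)$-flatness hypotheses. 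Those conditions constrain only $H$, $W$, and $R_g$; Proposition~\ref{prop-spin-Sobolev} yields pointwise decay only in terms of $\Ebf_{\kappa}^{\Hcal}(s,\Psi)$, over which you have no control here. So as a proof of the lemma under its stated assumptions, this step does not close. To be fair, the paper's own proof retains $|R_g|\,|\Psi|_{\vec n}$ in the intermediate estimate \eqref{equa-28sept-2025b} and then passes to \eqref{eq9-22-aout-2025} without addressing it, so the gap exists in the source as well. Within the bootstrap of Section~\ref{section=N16}---the only place the lemma is applied---your Einstein-equation argument becomes legitimate: the Sobolev decay then follows from the bootstrap energy bounds \eqref{eq1-02-oct-2025}, and one obtains $|R_g|\lesssim\eps^2 t^{-3}\lesssim s^{-2}$, which is absorbed into the first summand.
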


\begin{proof} 
\bse
Thanks to Proposition~\ref{prop1-24-july-2025} (i.e.~the pointwise estimates), we have 
\be
[\Psi]_{1,1}\lesssim \zetab[\Psi]_{2,2},\quad |\Psi|_{\vec{n}}\lesssim \zetab[\Psi]_{1,1}.
\ee
We also observe that $\zetab^{-1}\lesssim (s/t)$ in $\Mcal^{\Hcal}_{[s_0,s_1]}$. Therefore,~\eqref{eq8-22-aout-2025} reduces to
\be
\aligned
\big|R[H,\Psi]\big|_{\vec{n}}
& \lesssim  
s^{-2}[\Psi]_{2,2} + |R||\Psi|_{\vec{n}}
\\
& \quad+ (s/t)^{-1}\big(t^{-1} + |\del H|\big)|H|_1[\Psi]_{2,2}\sum_{k=0}^2\big((s/t)^{-2}|H|\big)^k
\\
& \quad
+ s^{-2}\zetab(s/t)^{-2}|\del H||H|[\Psi]_{2,2}
\\
& \quad+t^{-1}\zetab^{-2}|\del H|[\Psi]_{2,2} + t^{-1}\zetab^{-1}|\del H|_1|\Psi|_{2,2}
+ s^{-2}(s/t)^{-1}|H||\Psi|_{2,2}.
\endaligned
\ee
Recalling~\eqref{eq-UK-condition}, we deduce that 
\be\label{equa-28sept-2025b}
\aligned
\big|R[H,\Psi]\big|_{\vec{n}}
& \lesssim  
s^{-2}[\Psi]_{2,2} + |R||\Psi|_{\vec{n}}
+ (s/t)^{-1}\big(t^{-1} + |\del H|\big)|H|_1[\Psi]_{2,2}\sum_{k=0}^2\big((s/t)^{-2}|H|\big)^k
\\
& \quad+t^{-1}\zetab^{-2}|\del H|[\Psi]_{2,2} +  t^{-1}\zetab^{-1}|\del H|_1|\Psi|_{2,2} .
\endaligned
\end{equation}
Next, we differentiate the spinor field and we consider $R[H,\mathscr{Z}^I\Psi]$. It is immediate from the definition that
\be
[\mathscr{Z}^I\Psi]_{p',k'}\lesssim [\Psi]_{p+p',k+k'}, 
\ee
so that~\eqref{equa-28sept-2025b} implies~\eqref{eq9-22-aout-2025}. 
\ese
\end{proof}

%-----------------------------------------------

\paragraph{Second technical estimate.}

To proceed next with the proof Proposition~\ref{prop-spin-orthgonal}, we need another rather technical observation, stated now and whose proof is postponed to Section~\ref{section=N12}. The estimate below controls how the Dirac operator itself acts on the Dirac remainders in~\eqref{eq10a-18-july-2025}.  

\begin{proposition}[Applying the Dirac operator to the Dirac remainders in the hyperboloidal domain]
\label{lem1-22-aout-2025}
In the hyperboloidal domain $\Mcal^{\Hcal}_{[s_0,s_1]}$ and for any admissible operator $\mathscr{Z}^I$ of type $(p,k)$, one has 
\begin{equation}
\aligned
\big|\opDirac\Phi^I - \mathrm{i}M\Phi^I\big|_{\vec{n}}
\lesssim_N& \sum_{p_1+p_2\leq p\atop k_1+k_2\leq k}[\Psi]_{p_1,k_1}|LH|_{p_2-1,k_2-1} 
+T_N[H]\,[\Psi]_{p,k}
\\
& \quad + S_{N,p}[H,\Psi] + \Hcom_{p}[\Psi] + \Hwave_{p}[\Psi]
\endaligned
\end{equation}
with
\bel{equa-TN-SN}
\aligned
T_N[H] & := s^{-1}|H|_{N-2} + |\del H|_{N-3} + \zetab^{-1}|\del\Hu^{00}|_{N-3}
+ (s/t)^{-2}\hspace{-0.3cm}\sum_{p_1+p_2\leq N-4}\hspace{-0.4cm}|H|_{p_1+1}|H|_{p_2+1},
\\ 
S_{N,p}[H,\Psi]  & :=   \big(t^{-1} + |\del H|_{N-4}\big)|H|_{N-3}[\del\Psi]_{p+1} 
\\
& \quad + \big(|\del W|_{N-3} + |W|_{N-3} + (s/t)^{-1}|\del H|_{N-4}|H|_{N-3}\big)[\Psi]_{p+1}.
\endaligned
\ee
\end{proposition}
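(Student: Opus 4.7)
The plan is to convert the Dirac expression into a Klein--Gordon expression and then invoke the wave-operator analysis already developed. Using Lemma~\ref{lem1-21-feb-2025} one has the factorization $(\opDirac-\mathrm{i}M)(\opDirac+\mathrm{i}M)=\opDirac^{2}+M^{2}=-\Box_{g}-\tfrac{R_g}{4}+M^{2}$. Since by definition $\Phi^{I}=\opDirac(\mathscr{Z}^{I}\Psi)+\mathrm{i}M(\mathscr{Z}^{I}\Psi)$, applying $\opDirac-\mathrm{i}M$ gives
\[
(\opDirac-\mathrm{i}M)\Phi^{I}\;=\;\bigl(-\Box_{g}+M^{2}-\tfrac{R_g}{4}\bigr)(\mathscr{Z}^{I}\Psi).
\]
Because $\Psi$ itself satisfies the Dirac equation, $\Psi$ automatically solves $\Box_{g}\Psi=(M^{2}-R_g/4)\Psi$. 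Substituting $\mathscr{Z}^{I}(\Box_{g}\Psi)=M^{2}\mathscr{Z}^{I}\Psi-\tfrac{1}{4}\mathscr{Z}^{I}(R_g\Psi)$ and rearranging yields the key reduction
\[
(\opDirac-\mathrm{i}M)\Phi^{I}\;=\;-[\Box_{g},\mathscr{Z}^{I}]\Psi\;+\;\tfrac{1}{4}[\mathscr{Z}^{I},R_g]\Psi,
\]
which turns the estimate into a pure commutator problem for the wave operator and for multiplication by the scalar curvature, avoiding any need to control $\widehat{\del_{t}}\widehat{\del_{t}}(\mathscr{Z}^{I}\Psi)$ directly.

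\textbf{Commutator with the wave operator.} I then intend to expand $[\Box_{g},\mathscr{Z}^{I}]\Psi$ via the semi-hyperboloidal decomposition of Lemma~\ref{lem1-30-july-2025}, writing
\[
\Box_{g}\Phi \;=\; \gu^{00}\widehat{\del_{t}}(\widehat{\del_{t}}\Phi)+R_{2}[H,\Phi]+R_{1}[H,\Phi]+R_{0}[H,\Phi]-\nabla_{W}\Phi
\]
applied alternatively to $\mathscr{Z}^{I}\Psi$ and, after commuting $\mathscr{Z}^{I}$ across each structural block, to $\Psi$. Distributing $\mathscr{Z}^{I}$ over the principal term $\gu^{00}\widehat{\del_{t}}\widehat{\del_{t}}$ and using the admissible partition $I=I_{1}\odot I_{2}$, the only contributions not already of quadratic/cubic type involve $\mathscr{Z}^{I_{1}}(\gu^{00})$ with $|I_{1}|\ge 1$. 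A Lorentz boost hitting $\gu^{00}$ yields factors of $L_{a}\gu^{00}$, and after unwinding the definition \eqref{eq14-16-aout-2025} of $\gu^{00}$ (where the Minkowski piece $(r^{2}-t^{2})/t^{2}$ is annihilated modulo boosted coefficients) these are controlled by $|LH|_{p_{2}-1,k_{2}-1}$, producing exactly the distinguished sum $\sum [\Psi]_{p_{1},k_{1}}|LH|_{p_{2}-1,k_{2}-1}$ that drives the hierarchical estimate in Proposition~\ref{prop-spin-orthgonal}. All remaining contributions from the principal term and from $R_{2},R_{1},R_{0}$ are estimated with Proposition~\ref{prop1-15-july-2025} together with Corollary~\ref{cor1-21-aout-2025} and Lemma~\ref{lem1-01-aout-2025}, and are repackaged into the quantities $T_{N}[H]$ and $S_{N,p}[H,\Psi]$ defined in~\eqref{equa-TN-SN}. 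The term $[\nabla_{W},\mathscr{Z}^{I}]\Psi$ contributes the wave-gauge remainder $\Hwave_{p}[\Psi]$.

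\textbf{Commutator with the scalar curvature.} For $[\mathscr{Z}^{I},R_g]\Psi$, the Leibniz rule gives $\sum_{I_{1}\odot I_{2}=I,\,|I_{1}|\ge 1}(Z^{I_{1}}R_g)\cdot\mathscr{Z}^{I_{2}}\Psi$. Using the assumption $|R_g|_{N-3}\lesssim \eps_{s}$ from \eqref{eq-UK-condition} (note that via Einstein's equations $R_g$ is controlled by the spinor bilinear $T[\Psi]_{\mu}{}^{\mu}$, which in the bootstrap is expressed in terms of $\Psi$ and $\del\Psi$, hence reducing to the cubic schematic form), these contributions are absorbed into $T_{N}[H][\Psi]_{p,k}$ and into the cubic remainder $\Hcom_{p}[\Psi]$. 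Assembling the frame-norm estimate in $|\cdot|_{\vec{n}}$ and invoking Lemma~\ref{lem1-01-aout-2025} (so that Clifford products of coordinate vectors do not disturb the $\zetab$-weighted estimates) yields the stated bound.

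\textbf{Main obstacle.} The delicate point is the bookkeeping near the principal term $\gu^{00}\widehat{\del_{t}}\widehat{\del_{t}}$: I must guarantee that whenever a boost from $\mathscr{Z}^{I}$ hits a metric coefficient, the resulting factor is either recognized as $|LH|$ with the \emph{correct} lowered order $(p_{2}-1,k_{2}-1)$ that is compatible with the hierarchy on $\rank(I)$ exploited after Proposition~\ref{prop-spin-orthgonal}, or else can be absorbed into $T_{N}[H]$ using the decay of $|\del H|$ and $|\del\Hu^{00}|$ without loss. The natural admissible-partition expansion overcounts, and tracking which partitions genuinely generate the $|LH|$ term---as opposed to those already dominated by $|\del H|$ via the identity $L_{a}=t\del_{a}+x^{a}\del_{t}$---is where the combinatorics is most delicate; I plan to handle this by induction on $k=\rank(I)$, peeling off one boost at a time and matching it against one factor of $L_{a}\gu^{00}$.
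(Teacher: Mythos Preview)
Your starting identity is correct and elegant: applying $(\opDirac-\mathrm{i}M)$ to $\Phi^{I}=(\opDirac+\mathrm{i}M)\mathscr{Z}^{I}\Psi$ and using Lemma~\ref{lem1-21-feb-2025} does reduce the problem to
\[
(\opDirac-\mathrm{i}M)\Phi^{I}=-[\Box_{g},\mathscr{Z}^{I}]\Psi+\tfrac{1}{4}[\mathscr{Z}^{I},R_{g}]\Psi,
\]
and this is a genuinely different route from the paper's proof in Section~\ref{section=N12}. The paper never passes through the Lichnerowicz formula; instead it stays entirely at the Dirac level, expanding $[\widehat{Z},\opDirac]\Psi$ in the semi-hyperboloidal frame (Lemma~\ref{lem6-25-aout-2025}) and then applying $\opDirac$ and $-\mathrm{i}M$ separately, using the Clifford anti-commutator identities of Lemma~\ref{lem1-19-aout-2025} to reorganize the resulting products. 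This yields the explicit quadratic--cubic splitting of Proposition~\ref{prop-quad-cubic}, whose quadratic part carries the coefficient $\underline{\pi[Z]}^{00}$ rather than $Z(\gu^{00})$. The advantage is that $\pi[Z]$ vanishes identically for Minkowski Killing fields, so the key structural identity $\underline{\pi[L_{a}]}^{00}=-L_{a}\Hu^{00}-2(x^{a}/t)\Hu^{00}$ (Lemma~\ref{lem1-24-aout-2025}) is a clean algebraic cancellation among $H$-terms only, from which the distinguished $|LH|$ term falls out directly. The full statement is then obtained by induction on $\ord(I)$ (Lemma~\ref{lem5-25-aout-2025} and the closing argument).

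Your plan has a gap at the level of execution. When you split $[\Box_{g},\mathscr{Z}^{I}]\Psi$ via the block decomposition $\Box_{g}=\gu^{00}\widehat{\del_{t}}\widehat{\del_{t}}+R_{2}+R_{1}+R_{0}-\nabla_{W}$ from Lemma~\ref{lem1-30-july-2025}, each block carries a Minkowski-background piece of size $(s/t)^{2}$: for instance $L_{a}(\gu^{00})$ contains $L_{a}(\underline{\eta}^{00})=2(x^{a}/t)(s/t)^{2}$, and the term $\gu^{00}[\widehat{\del_{t}}\widehat{\del_{t}},\widehat{L_{a}}]\Psi$ in flat space equals $-2(s/t)^{2}\widehat{\del_{t}}\widehat{\del_{a}}\Psi$. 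Neither of these is small enough by itself to be absorbed into $T_{N}[H]\,[\Psi]_{p,k}$ or $S_{N,p}$. Since $[\Box_{\eta},\widehat{Z}]=0$ for admissible $Z$ (in the flat frame), these Minkowski pieces must cancel exactly against flat contributions from the commutators with $R_{2}$ and $R_{1}$ --- but this cancellation is nontrivial and spreads across several structural blocks. Your claim that the Minkowski piece is ``annihilated modulo boosted coefficients'' only says $L_{a}(\underline{\eta}^{00})\sim(s/t)^{2}$, which is not the same as showing the net contribution vanishes. You would need either to track this cross-block cancellation explicitly, or to first subtract the flat commutator (writing $[\Box_{g},\mathscr{Z}^{I}]=[\Box_{g}-\Box_{\eta},\mathscr{Z}^{I}]$ modulo the $g$-dependence hidden in the Clifford correction of $\widehat{Z}$). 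The paper's choice to work with $\underline{\pi[Z]}^{00}$ rather than $Z(\gu^{00})$ circumvents this entirely.

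The scalar-curvature commutator $[\mathscr{Z}^{I},R_{g}]\Psi$ can indeed be absorbed (since in wave gauge $|R_{g}|_{p_{1}}\lesssim|\del\del H|_{p_{1}}+\ldots\lesssim|\del H|_{p_{1}+1}$ and $p_{1}+1\le N-3$ under the hypothesis $p\le N-4$), but your justification via $|R_{g}|_{N-3}\lesssim\eps_{s}$ alone is insufficient, as $\eps_{s}[\Psi]_{p,k}$ does not decay and is not dominated by $T_{N}[H]\,[\Psi]_{p,k}$.
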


Let us observe the following feature of our bootstrap argument. 
The terms $T_N$ in \eqref{equa-TN-SN} will be proven to enjoy uniformly integrable bounds (that is, independent of $(s/t)$) along the orthogonal curves. This property will allow us to apply a Gronwall argument. On the other hand, the terms $S_{N,N-4}$, $\Hcom_{N-4}$ and $\Hwave_{N-4}$ are also integrable along the orthogonal curves and enjoy Sobolev decay. The crucial structure comes in the \textit{first sum in the right-hand side:} this term vanishes when $k_2 = 0\Leftrightarrow k_1\leq k-1$. In particular, when $k=0$, the first sum \underline{does not exist}. This \textit{hierarchy structure} is a key ingredient allowing us to perform an induction on $k$. 

}

%------------------------------------------------------------------------------------------------------------------

\subsection{ Proof of Proposition~\ref{prop-spin-orthgonal}}
\label{section-prop-spin-orthgonal}

{ 

\paragraph{Step 1.}

We now turn our attention to the general order pointwise estimates which are derived by integration along orthogonal curves in the hyperboloidal domain. 
We are going to establish first the following estimate:
\begin{equation}\label{eq6-26-aout-2025}
[\del\Psi]_{p,k}\lesssim_p\sum_{ord(I)\leq p\atop \rank(I)\leq k} \big|\widehat{\del_t}\mathscr{Z}^I\Psi\big|_{\vec{n}}
+ t^{-1}[\Psi]_1 + |\del H||\Psi|_{\vec{n}}
+ \zetab^{-1}\Hcom_{p-1}[\Psi].
\end{equation}
This is a version of Proposition~\ref{prop1-21-july-2025}. In fact we only need to point out that
\begin{equation}\label{eq5-26-aout-2025}
\aligned
\widehat{\del_a}\Psi  & =  \nabla_{\del_a}\Psi 
- \frac{1}{4}g^{\mu\nu}\del_{\mu}\cdot\nabla_{\nu}\del_a\Psi
\\
& =  t^{-1}\nabla_{L_a} - (x^a/t)\nabla_t\Psi 
- \frac{1}{4}g^{\mu\nu}\del_{\mu}\cdot\nabla_{\nu}\del_a\Psi
\\
& = -(x^a/t)\widehat{\del_t}\Psi + t^{-1}\widehat{L_a}\Psi 
\\
& \quad + \frac{1}{4t}g^{\mu\nu}\del_{\mu}\cdot\nabla_{\nu}L_a\cdot\Psi 
- \frac{x^a}{4t}g^{\mu\nu}\del_{\mu}\cdot\nabla_{\nu}\del_t\cdot\Psi
- \frac{1}{4}g^{\mu\nu}\del_{\mu}\cdot\nabla_{\nu}\del_a\cdot\Psi, 
\endaligned
\end{equation}
in which, to deal with the last three terms in the right-hand side, we apply the basic inequalities~\eqref{eq1-31-july-2025} and~\eqref{eq2-31-july-2025}. 

We are then in a position to apply Proposition~\ref{prop2-14-aout-2025} to the equation  
\begin{equation}
\opDirac \big(\mathscr{Z}^I\Psi\big) + \mathrm{i}M\big(\mathscr{Z}^I\Psi\big) = \Phi^I
\end{equation}
for any $\mathscr{Z}^I$ of type $(p,k)$, in which we defined $\Phi^I = -[\mathscr{Z}^I,\opDirac]\Psi$. To proceed, we need to bound several groups of source/error terms, as follows. 

%--------------------------- 

\paragraph{Step 2.} A first group contains the following terms:
\be
P[\Psi] := s^{1/2}\lapsb\big((s/t)[\Psi]_{1,1} + \lapsb|\Psi|_{\vec{n}}\big)
+ s^{1/2}[\Psi]_{1,1}
+ s^{3/2}\zetab^{-1}\lapsb|\del H||\Psi|_{\vec{n}}
\ee
Provided the metric is $(\eps_s,\delta,\kappa)$--flat (so that Corollary~\ref{cor1-16-june-2025} applies), we then have 
\begin{equation}
P[\mathscr{Z}^I\Psi] \lesssim_p s^{1/2}[\Psi]_{N-3} + \eps_s \big(s^{3/2}\lapsb[\Psi]_{p,k}\big).
\end{equation}
We will see that the first term enjoys sufficient decay, while the second term will be absorbed by the left-hand side of the main estimate. 

%--------------------------- 

\paragraph{Step 3.} A second group of terms is associated with the initial data, namely 
\be
D_{t,x}[\Psi]:=
\Big(\big(\big|\nabla_{\vec{L}}(s^{3/2}\Psi)\big|_{\vec{n}}+ \lapsb M|s^{3/2}\Psi|_{\vec{n}}\Big)_{\gamma_{t,x}(s_{t,x}^*)}
\ee
using here a notation introduced in Section~\ref{section===10-3}. We recall that 
\be
\mathcal{I}_{[s_0,s_1]} := \Mcal^{\Hcal}_{s_0}\cup\{r=t-1|s_0\leq s\leq s_1\}
\ee
is the past and null boundary of $\Mcal^{\Hcal}_{[s_0,s_1]}$.
It is immediate that
\be
\aligned
\nabla_{\vec{L}}(s^{3/2}\Psi)  & =  (3/2)s^{1/2}\Psi + (s/t)s^{3/2}\widehat{\del_t}\Psi 
+ s^{3/2}t^{-1}\betab^a\widehat{L_a}\Psi 
\\
& \quad + \frac{s}{4t}s^{3/2}g^{\mu\nu}\del_{\mu}\cdot\nabla_{\nu}\del_t\cdot\Psi
+ \frac{\betab^a}{4t}s^{3/2}\del_{\mu}\cdot\nabla_{\nu}L_a\cdot\Psi.
\endaligned
\ee
We recall that~\eqref{eq8-31-july-2025} leads us to (in view of~\eqref{eq-US-condition})
\be
|\betab^a|\lesssim (s/t)^{-1}.
\ee 
Consequently, thanks to Proposition~\ref{prop1-21-july-2025}, we deduce the estimate
\be
\aligned
\big|\nabla_{\vec{L}}(s^{3/2}\mathscr{Z}^I\Psi)\big|_{\vec{n}}
& \lesssim  s^{1/2}[\Psi]_{N-3} + (s/t)s^{3/2}[\del\Psi]_{N-4} 
\\
& \quad + (s/t)^{-1}\zetab^{-1}s^{3/2}|\del H|[\Psi]_{N-4} + s^{3/2}\Hcom_{p-1}[\Psi].
\endaligned
\ee
Observe that $t\simeq s^2$ along $\{r=t-1\}$, hence 
\begin{equation}\label{eq7-26-aout-2025}
\big|\nabla_{\vec{L}}(s^{3/2}\mathscr{Z}^I\Psi)\big|_{\vec{n}}\lesssim s^{1/2}[\Psi]_{N-3} 
+ (s/t)^{-1}s^{3/2}|\del H|\,[\Psi]_{N-3}
+ s^{3/2}\Hcom_{p-1}[\Psi].
\end{equation}

At this stage, we discuss in connection with the statement in Proposition~\ref{lem2-24-july-2025}, that is, we must pay attention to the initial point $\gamma_{t,x}(s_{t,x}^*)$. We first consider $(t,x)\in \Mcal^{\frac{3}{5}}_{[s_0,s_1]} = \Mcal^{\Hcal}_{[s_0,s_1]}\cap \{r\geq 4t/5\}$ and, in this case, we apply~\eqref{eq5-19-june-2025} and find 
\bse
\be
(s/t)|_{\gamma_{t,x}(s_{t,x}^*)}\lesssim (s/t).
\ee
Also recalling that, in this case, $\gamma_{t,x}(s_{t,x}^*)\in \{r=t-1\}$ (provided $s_0=2$), we deduce that
\be
\gamma_{t,x}(s_{t,x}^*) = (t^*,x^*) 
\ee
with the property  
\begin{equation}
(s/t)^{-2}\lesssim t^*,\quad (s/t)^{-1}\lesssim s^* = \sqrt{|t^*|^2 - |x^*|^2}. 
\end{equation}
To check this last statement, we only point out the relations
\be
\frac{|t^*|^2 - |x^*|^2}{|t^*|^2} = (s/t)_{\gamma_{t,x}(s_{t,x}^*)}\lesssim(s/t),
\quad
t^*-|x^*| = 1.
\ee
\ese
In turn, from \eqref{eq7-26-aout-2025} we infer that 
\be
\aligned
\big[D_{t,x}[\mathscr{Z}^I\Psi]\big]_{\vec{n}}
\lesssim_p &(s/t)^{-3/2}\big(\lapsb M[\Psi]_{p,k} + (s/t)^{-1}|\del H|[\Psi]_{N-3} + \Hcom_{p-1}[\Psi]\big)_{\tau_{t,x}(s_{t,x}^*)}
\\ 
& \quad + (s/t)^{-1/2}[\Psi]_{N-3}.
\endaligned
\ee

%--------------------------- 

\paragraph{Step 4.} 
\bse
A third group of terms are the error terms contained in $F_{t,x}$, which are
\begin{equation}
E[\Psi]:= s^{3/2}\lapsb^2|\nabla_W\Psi|_{\vec{n}} + s^{3/2}\lapsb^2\big|R[H,\Psi]\big|_{\vec{n}}.
\end{equation}
The latter term $s^{3/2}\lapsb^2\big|R[H,\Psi]\big|_{\vec{n}}$ is bounded by Lemma~\ref{lem2-22-aout-2025}, while for the former term we write 
\be
\big|\nabla_W\big(\mathscr{Z}^I\Psi\big)\big|_{\vec{n}}
\lesssim \big((s/t)^{-1}\lapsb^2|W|\big)\big((s/t)s^{3/2}[\del\Psi]_{p,k}\big) 
+ \zetab^{-1}\lapsb |W||\del H|\big(s^{3/2}\lapsb[\Psi]_{p,k}\big).
\ee
We then have
\begin{equation}
\aligned
E\big[\mathscr{Z}^I\Psi\big]_{p,k}
& \lesssim_p  \big((s/t)^{-1}\lapsb^2|W|\big)\big((s/t)s^{3/2}[\del\Psi]_{p,k}\big) 
+ \zetab^{-1}\lapsb |W||\del H|\big(s^{3/2}\lapsb[\Psi]_{p,k}\big)
\\
& \quad +s^{3/2}\lapsb^2\big(s^{-2} + s^{-1}(s/t)^{-1}|\del H|_1\big)[\Psi]_{N-2}
\\
& \quad +s^{3/2}\lapsb^2(s/t)^{-1}\big(t^{-1} + |\del H|\big)|H|_1[\Psi]_{N-2}\sum_{k=0}^2\big((s/t)^{-2}|H|\big)^k.
\endaligned
\end{equation}

Finally we consider the source term by recalling Proposition~\ref{lem1-22-aout-2025}: 
\be
\aligned
s^{3/2}\lapsb^2\big|\opDirac\Phi - \mathrm{i}M\Phi\big|_{\vec{n}}
\lesssim_N& \sum_{p_1+p_2\leq p\atop k_1+k_2\leq k}
\big(s^{3/2}\lapsb [\Psi]_{p_1,k_1}\big)\big(\lapsb|LH|_{p_2-1,k_2-1}\big) 
+\lapsb T_N[H]\big(s^{3/2}\lapsb[\Psi]_{p,k}\big)
\\
& \quad + \lapsb^2s^{3/2} \big(S_{N,p}[H,\Psi] + \Hcom_{p}[\Psi] + \Hwave_{p}[\Psi]\big).
\endaligned
\ee
Now substituting the above estimates into~\eqref{eq8-26-aout-2025} we find
\begin{equation}\label{eq9-26-aout-2025}
\aligned
&s^{3/2}\big((s/t)\sum_{\alpha}|\del_{\alpha}\mathscr{Z}^I\Psi|_{\vec{n}} + \lapsb M|\mathscr{Z}^I\Psi|_{\vec{n}}\big)(t,x)
\\
& \lesssim_M 
(s/t)^{-3/2}\big(\lapsb M[\Psi]_{N-4} + (s/t)^{-1}|\del H|[\Psi]_{N-3} + \Hcom_{p-1}[\Psi]\big)_{\tau_{t,x}(s_{t,x}^*)} 
\\
& \quad+ (s/t)^{-1/2}\big([\Psi]_{N-3}\big)_{\tau_{t,x}(s_{t,x}^*)}   +s^{1/2}[\Psi]_{N-3}
\\
& \quad + \sum_{p_1+p_2\leq p\atop k_1+k_2\leq k}\int_{s_{t,x}^*}^s 
\Abf_{p_1,k_1}[\Psi](\lambda)\big(\lapsb|LH|_{p_2-1,k_2-1}\big)\big|_{t,x}(\lambda)\diff \lambda
\\
& \quad +\int_{s_{t,x}^*}^s  \Abf_{p,k}[\Psi](\lambda)I_N[H,W]\big|_{t,x}(\lambda) \diff \lambda
+\int_{s{t,x}^*}^s J_N[H,W,\Psi]\big|_{t,x}(\lambda) \diff \lambda
\endaligned
\end{equation}
with 
\be
\aligned
I_N[H,W]  & :=  \lapsb T_{N}[H] + \big((s/t)^{-1}\lapsb + \zetab^{-1} |\del H|\big)\lapsb|W|
\\
& =  s^{-1}|H|_{N-2} + |\del H|_{N-3} + \zetab^{-1}|\del\Hu^{00}|_{N-3}
+ (s/t)^{-2}\hspace{-0.3cm}\sum_{p_1+p_2\leq N-4}\hspace{-0.4cm}|H|_{p_1+1}|H|_{p_2+1}
\\
& \quad + \big((s/t)^{-1}\lapsb + \zetab^{-1} |\del H|\big)\lapsb|W|,
\endaligned
\ee
\be
\aligned
J_N[H,W,\Psi]  & := \lapsb^2s^{3/2} \big(S_{N,N-4}[H,\Psi] + \Hcom_{N-4}[\Psi] + \Hwave_{N-4}[\Psi]\big)
\\
& \quad + s^{3/2}\lapsb^2\big(s^{-2} + s^{-1}(s/t)^{-1}|\del H|_1\big)[\Psi]_{N-2}
\\
& \quad +s^{3/2}\lapsb^2(s/t)^{-1}\big(t^{-1} + |\del H|\big)|H|_1[\Psi]_{p+2,k+2}\sum_{k=0}^2\big((s/t)^{-2}|H|\big)^k
\\
& = s^{3/2}\lapsb^2\big(t^{-1} + |\del H|_{N-4}\big)|H|_{N-3}[\del\Psi]_{N-3} 
\\
& \quad + s^{3/2}\lapsb^2\big(|\del W|_{N-3} + |W|_{N-3} + (s/t)^{-1}|\del H|_{N-4}|H|_{N-3}\big)[\Psi]_{N-3}
\\
& \quad + s^{3/2}\lapsb^2\big(s^{-2} + s^{-1}(s/t)^{-1}|\del H|_1\big)[\Psi]_{N-2}
\\
& \quad +s^{3/2}\lapsb^2(s/t)^{-1}\big(t^{-1} + |\del H|\big)|H|_1[\Psi]_{N-2}\sum_{k=0}^2\big((s/t)^{-2}|H|\big)^k.
\endaligned
\ee
Both $I_N$ and $J_N$ are expected to be uniformly integrable (that is, independently of $(s/t)$). Then we apply Proposition~\ref{prop1-21-july-2025} together with~\eqref{eq9-26-aout-2025}, and arrive at~\eqref{eq11-26-aout-2025}. This completes the proof of Proposition~\ref{prop-spin-orthgonal}. 
\ese
}

%=============================================================================

\section{Source terms in the hyperboloidal domain (Proposition~\ref{lem1-22-aout-2025})}
\label{section=N12}

\subsection{ An algebraic decomposition}
\label{sec1-20-aout-2025}

{ 

We now provide the proof of Proposition~\ref{lem1-22-aout-2025}, which was stated and used in the previous section. Throughout this section, $\Psi$ denotes a sufficiently regular solution to the massive Dirac equation
\begin{equation}\label{eq9-21-aout-2025}
\opDirac\Psi + \mathrm{i}M\Psi = 0 \qquad \text{ in } \Mcal^{\Hcal}_{[s_0,s_1]}. 
\end{equation}
All of our results in this section hold in the hyperboloidal domain $\Mcal^{\Hcal}_{[s_0,s_1]}$ (unless  otherwise specified).

\begin{lemma}[Two Clifford--Dirac anti-commutator identities]
\label{lem1-19-aout-2025}
For any sufficiently regular vector field $X$ and any sufficiently regular spinor field $\Psi$, one has 
\begin{equation}\label{eq5-22-aout-2025}
\{\opDirac,X\cdot\}\Psi:=\opDirac(X\cdot\Psi) + X\cdot\opDirac\Psi  = - 2\nabla_X\Psi + \opDirac(X)\cdot\Psi
\end{equation}
and, moreover, if $X,Y,Z$ are sufficiently regular vector fields one also has the identity 
\begin{equation}\label{eq4-22-aout-2025}
\aligned
\{\opDirac,X\cdot Y\cdot Z\cdot\}\Psi  & :=  \opDirac(X\cdot Y\cdot Z\cdot \Psi) + X\cdot Y\cdot Z\cdot \opDirac\Psi 
\\
& =  2X\cdot Y\cdot\nabla_Z\Psi + 2X\cdot Z\cdot\nabla_Y\Psi - 2Y\cdot Z\cdot\nabla_X\Psi
\\
& \quad + 2X\cdot\nabla_Y(Z)\cdot \Psi - 2\nabla_XY\cdot Z\cdot\Psi - 2Y\cdot\nabla_XZ\cdot \Psi
\\
& \quad + \opDirac(X)\cdot Y\cdot Z\cdot \Psi  - X\cdot\opDirac(Y)\cdot Z\cdot\Psi +X\cdot Y\cdot\opDirac(Z)\cdot\Psi. 
\endaligned
\end{equation}
\end{lemma}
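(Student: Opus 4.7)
The plan is to derive both identities by directly expanding $\opDirac$ via its definition $\opDirac=g^{\mu\nu}\del_\mu\cdot\nabla_\nu$, applying the Leibniz rule to the covariant derivative of the Clifford product, and then moving $\del_\mu$ to the right past the prescribed vector fields using only the fundamental Clifford anti-commutation relation
\begin{equation}
\del_\mu\cdot\del_\eta+\del_\eta\cdot\del_\mu = -2g_{\mu\eta}
\end{equation}
(cf.~\eqref{clifford-anticomm}), together with Lemma~\ref{lem1-24-feb-2025}, which ensures that $\nabla_\nu$ and Clifford multiplication by the coordinate vectors commute. The only tensorial identity needed is the contraction $g^{\mu\nu}g(\del_\mu,A)\nabla_\nu\Psi = \nabla_A\Psi$ valid for any vector field $A$.

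For the first identity, I would expand
\begin{equation}
\opDirac(X\cdot\Psi)=g^{\mu\nu}\del_\mu\cdot(\nabla_\nu X)\cdot\Psi+g^{\mu\nu}\del_\mu\cdot X\cdot\nabla_\nu\Psi,
\end{equation}
recognize the first term as $\opDirac(X)\cdot\Psi$ by the definition of $\opDirac$ applied to a vector viewed as a Clifford element, and rewrite the second term using $\del_\mu\cdot X=-X\cdot\del_\mu-2g(\del_\mu,X)$ to produce $-X\cdot\opDirac\Psi-2\nabla_X\Psi$. Adding $X\cdot\opDirac\Psi$ to both sides collapses the anti-commutator to the asserted form.

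For the second identity, the strategy is identical but applied three times. I would first apply Leibniz to $\nabla_\nu(X\cdot Y\cdot Z\cdot\Psi)$, producing four contributions according to which factor is differentiated. In each contribution, one moves $\del_\mu$ to the right across the undifferentiated vectors using the anti-commutator, picking up boundary terms $-2g(\del_\mu,\,\cdot\,)$ at every swap; after contraction with $g^{\mu\nu}$, each such boundary term converts into a directional derivative $\nabla_{(\,\cdot\,)}$. In the term where $\nabla_\nu$ hits $\Psi$, moving $\del_\mu$ past three vectors produces one $\opDirac\Psi$ factor (which cancels upon forming the anti-commutator) together with three boundary terms involving $\nabla_X$, $\nabla_Y$, and $\nabla_Z$ applied to $\Psi$. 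The terms where $\nabla_\nu$ hits one of the vectors yield $\opDirac(X)$, $\opDirac(Y)$, or $\opDirac(Z)$ contributions with alternating signs, plus quadratic boundary pieces of the form $\nabla_{(\cdot)}(\cdot)\cdot\Psi$.

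The main obstacle is purely a bookkeeping one: tracking the alternating signs generated by each anti-commutation and the combinatorial placement of the boundary factors among $X, Y, Z$, while ensuring consistency with the $(-,+,+,+)$ Clifford convention used throughout. No curvature identities or deeper structural results are required; the computation is a finite iteration of the single-vector argument used for \eqref{eq5-22-aout-2025}, and the result is reorganized into the three groups visible in \eqref{eq4-22-aout-2025} (the Dirac-of-vector terms, the mixed connection terms $\nabla_X Y$, $\nabla_X Z$, $\nabla_Y Z$, and the spinor-derivative terms $\nabla_X\Psi$, $\nabla_Y\Psi$, $\nabla_Z\Psi$) by grouping contributions according to where the two covariant derivatives land.
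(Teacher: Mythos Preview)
Your proposal is correct and follows essentially the same approach as the paper: expand $\opDirac(X\cdot\Psi)$ via Leibniz, identify $\opDirac(X)\cdot\Psi$, and use the Clifford anti-commutation $\del_\mu\cdot X = -X\cdot\del_\mu - 2g(\del_\mu,X)$ to obtain the first identity; the paper then dispatches \eqref{eq4-22-aout-2025} with the single remark ``we simply apply the identity \eqref{eq5-22-aout-2025} three times,'' which is exactly your iteration plan.
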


\begin{proof} A direct computation shows that 
\be
\aligned
\opDirac(X\cdot\Psi)  & =  g^{\alpha\beta}\del_{\alpha}\cdot\nabla_{\beta}(X\cdot\Psi)
= g^{\alpha\beta}\del_{\alpha}\cdot X\cdot\nabla_{\beta}\Psi 
+ g^{\alpha\beta}\del_{\alpha}\cdot\nabla_{\beta}X\cdot\Psi
\\
& =  -g^{\alpha\beta}X\cdot\del_{\alpha}\cdot\nabla_{\beta}\Psi 
- 2g^{\alpha\beta}(X,\del_{\alpha})\nabla_{\beta}\Psi
+ \opDirac(X)\cdot\Psi
\\
& =  -X\cdot\opDirac\Psi - 2\nabla_X\Psi + \opDirac(X)\cdot\Psi.
\endaligned
\ee
To derive~\eqref{eq4-22-aout-2025}, we simply apply the identity~\eqref{eq5-22-aout-2025} three times. 
\end{proof}

%---------------------------------------------

Given any admissible vector field $Z$, we consider the expression 
\bel{equa-JJD39}
\opDirac([\widehat{Z},\opDirac]\Psi) - \mathrm{i}M[\widehat{Z},\opDirac]\Psi
\ee
where $\widehat{Z}$ denotes the Clifford-adapted derivative. 
Recalling Lemma~\ref{lem6-25-aout-2025} and recalling our ``underline notation'' for components in the semi-hyperboloidal frame, we rewrite the second term in \eqref{equa-JJD39} as 
\begin{equation}\label{eq4-20-aout-2025}
\aligned
- \mathrm{i}M[\widehat{Z},\opDirac]\Psi 
& =   \frac{1}{2}\underline{\pi[Z]}^{\alpha0}\delu_{\alpha}\cdot\widehat{\del_t}(\mathrm{i}M\Psi) 
+ \frac{1}{2t}\underline{\pi[Z]}^{\alpha b}\delu_{\alpha}\cdot\widehat{L_b}(\mathrm{i}M\Psi)
\\
& \quad + \frac{1}{8}g^{\mu\nu}\underline{\pi[Z]}^{\alpha0}\delu_{\alpha}\cdot\del_{\mu}\cdot\nabla_{\nu}\del_t\cdot(\mathrm{i}M\Psi) 
- \frac{1}{8t}g^{\mu\nu}\underline{\pi[Z]}^{\alpha b}
\delu_{\alpha}\cdot\del_{\mu}\cdot\nabla_{\nu}L_b\cdot(\mathrm{i}M\Psi)
\\
& \quad +\frac{1}{4}\pi[Z]^{\alpha\beta}\nabla_{\alpha}\del_{\beta}\cdot(\mathrm{i}M\Psi)
+\frac{1}{4}[Z,W]\cdot(\mathrm{i}M\Psi).
\\
& =:  S_1 + S_2 + S_3 + S_W. 
\endaligned
\end{equation}
Here, $S_1, S_2$ denote the first two terms (the only quadratic terms), while $S_W$ is determined by the generalized wave coordinate vector $W$, and the remaining cubic terms are gathered in $S_3$.

On the other hand, the first term in \eqref{equa-JJD39} can also be written as 
\be 
\aligned
\opDirac\big([\widehat{Z},\opDirac]\Psi\big) 
& = 
- \frac{1}{2}\opDirac\big(\underline{\pi[Z]}^{\alpha0}\delu_{\alpha}\cdot\widehat{\del_t}\Psi\big)
- \frac{1}{2}\opDirac\big(t^{-1}\underline{\pi[Z]}^{\alpha b}
\delu_{\alpha}\cdot\widehat{L_b}\Psi\big)
\\
& \quad -  \frac{1}{8}\opDirac\big(g^{\mu\nu}\underline{\pi[Z]}^{\alpha0}\delu_{\alpha}\cdot\del_{\mu}\cdot\nabla_{\nu}\del_t\cdot\Psi \big)
- \frac{1}{8}\opDirac\big(t^{-1}g^{\mu\nu}\underline{\pi[Z]}^{\alpha b}
\delu_{\alpha}\cdot\del_{\mu}\cdot\nabla_{\nu}L_b\cdot\Psi\big)
\\
& \quad - \frac{1}{4}\opDirac\big(\pi[Z]^{\alpha\beta}\nabla_{\alpha}\del_{\beta}\cdot\Psi\big)
- \frac{1}{4}\opDirac\big([Z,W]\cdot\Psi\big), 
\endaligned
\ee

hence
\begin{equation}\label{eq6-22-aout-2025}
\aligned
\opDirac\big([\widehat{Z},\opDirac]\Psi\big) 
& =  \frac{1}{2}\underline{\pi[Z]}^{\alpha0}
\delu_{\alpha}\cdot\widehat{\del_t}\big(\opDirac\Psi\big)
+ \frac{1}{2}t^{-1}\underline{\pi[Z]}^{\alpha b}
\delu_{\alpha}\cdot\widehat{L_b}\big(\opDirac\Psi\big)
\\
& \quad + \frac{1}{8}g^{\mu\nu}\underline{\pi[Z]}^{\alpha0}
\delu_{\alpha}\cdot\del_{\mu}\cdot\nabla_{\nu}\del_t\cdot\opDirac\Psi
+ \frac{1}{8}t^{-1}g^{\mu\nu}\underline{\pi[Z]}^{\alpha b}
\delu_{\alpha}\cdot\del_{\mu}\cdot\nabla_{\nu}L_b\cdot\opDirac\Psi
\\
& \quad +\frac{1}{4}\pi[Z]^{\alpha\beta}\nabla_{\alpha}\del_{\beta}\cdot\opDirac\Psi
+\frac{1}{4}[Z,W]\cdot\opDirac\Psi
\\
& \quad - \frac{1}{2}\{\opDirac,\underline{\pi[Z]}^{\alpha0}\delu_{\alpha}\cdot\}\widehat{\del_t}\Psi 
+ \frac{1}{2}\underline{\pi[Z]}^{\alpha0}\delu_{\alpha}\cdot[\opDirac,\widehat{\del_t}]\Psi
\\
& \quad - \frac{1}{2}\{\opDirac,t^{-1}\underline{\pi[Z]}^{\alpha b}\delu_{\alpha}\cdot\}\widehat{L_b}\Psi
+\frac{1}{2t}\underline{\pi[Z]}^{\alpha b}\delu_{\alpha}\cdot[\opDirac,\widehat{L_b}]\Psi
\\
& \quad + \frac{1}{8}\{\opDirac, g^{\mu\nu}\underline{\pi[Z]}^{\alpha0}\delu_{\alpha}\cdot\del_{\mu}\cdot\nabla_{\nu}\del_t\cdot\}\Psi
+ \frac{1}{8}\{\opDirac,t^{-1}g^{\mu\nu}\underline{\pi[Z]}^{\alpha b}\delu_{\alpha}\cdot\del_{\mu}\cdot\nabla_{\nu}L_b\cdot\}\Psi 
\\
& \quad + \frac{1}{4}\{\opDirac, \pi[Z]^{\alpha\beta}\nabla_{\alpha}\del_{\beta}\cdot\}\Psi 
+ \frac{1}{4}\{\opDirac,[Z,W]\cdot\}\Psi.
\endaligned
\end{equation} 
%-------------------- 
Summing up~\eqref{eq4-20-aout-2025} and~\eqref{eq6-22-aout-2025}, and recalling the Dirac equation~\eqref{eq9-21-aout-2025}, we deduce that 
\begin{equation}
\aligned
& \opDirac([\widehat{Z},\opDirac]\Psi) - \mathrm{i}M[\widehat{Z},\opDirac]\Psi
\\
& = - \frac{1}{2}\{\opDirac,\underline{\pi[Z]}^{\alpha0}\delu_{\alpha}\cdot\}\widehat{\del_t}\Psi 
+ \frac{1}{2}\underline{\pi[Z]}^{\alpha0}\delu_{\alpha}\cdot[\opDirac,\widehat{\del_t}]\Psi
\\
& \quad - \frac{1}{2}\{\opDirac,t^{-1}\underline{\pi[Z]}^{\alpha b}\delu_{\alpha}\cdot\}\widehat{L_b}\Psi
+\frac{1}{2t}\underline{\pi[Z]}^{\alpha b}\del_{\alpha}\cdot[\opDirac,\widehat{L_b}]\Psi
\\
& \quad + \frac{1}{8}\{\opDirac, g^{\mu\nu}\underline{\pi[Z]}^{\alpha0}\delu_{\alpha}\cdot\del_{\mu}\cdot\nabla_{\nu}\del_t\cdot\}\Psi
+ \frac{1}{8}\{\opDirac,t^{-1}g^{\mu\nu}\underline{\pi[Z]}^{\alpha b}\delu_{\alpha}\cdot\del_{\mu}\cdot\nabla_{\nu}L_b\cdot\}\Psi 
\\
& \quad + \frac{1}{4}\{\opDirac, \pi[Z]^{\alpha\beta}\nabla_{\alpha}\del_{\beta}\cdot\}\Psi 
+ \frac{1}{4}\{\opDirac,[Z,W]\cdot\}\Psi
\\
& =:  T_1 + T_2+T_3+T_4+T_5 + T_6 + T_7 + T_W.
\endaligned
\end{equation}
Here, $T_1$ contains quadratic terms, while the terms $T_3$ and $T_6$ contain the favorable decay factor $t^{-1}$. The term $T_W$ is due to the generalized wave coordinate condition, and the remaining terms $T_2, T_4, T_5, T_7$ are cubic in nature. We only need to analyze $T_1$ further, as follows. 

%-------------------------------------------------------- 

Indeed, we recall the decomposition in Lemma~\ref{lem1-19-aout-2025} and write 
\be
\aligned
T_1  & =  \underline{\pi[Z]}^{\alpha0}\nabla_{\delu_{\alpha}}\big(\widehat{\del_t}\Psi\big)
- \frac{1}{2}\opDirac(\underline{\pi[Z]}^{\alpha 0}\delu_{\alpha})\cdot\widehat{\del_t}\Psi
\\
& =  \underline{\pi[Z]}^{00}\nabla_t\big(\widehat{\del_t}\Psi\big)
+ t^{-1}\underline{\pi[Z]}^{b0}\nabla_{L_b}\big(\widehat{\del_t}\Psi\big)
\\
& \quad - \frac{1}{2}g^{\mu\nu}\del_{\mu}\cdot\del_{\nu}\big(\underline{\pi[Z]}^{\alpha0}\big)\delu_{\alpha}\cdot\widehat{\del_t}\Psi 
- \frac{1}{2}g^{\mu\nu}\underline{\pi[Z]}^{\alpha0}\del_{\mu}\cdot\nabla_{\nu}\delu_{\alpha}\cdot\widehat{\del_t}\Psi, 
\endaligned
\ee
hence
\be
\aligned
T_1   
& = \underline{\pi[Z]}^{00}\widehat{\del_t}\big(\widehat{\del_t}\Psi\big)
+\frac{1}{4}g^{\mu\nu}\underline{\pi[Z]}^{00}\del_{\mu}\cdot\nabla_{\nu}\del_t\cdot\widehat{\del_t}\Psi
\\
& \quad + t^{-1}\underline{\pi[Z]}^{b0}\widehat{L_b}\big(\widehat{\del_t}\Psi\big)
+\frac{1}{4t}g^{\mu\nu}\underline{\pi[Z]}^{b0}\del_{\mu}\cdot\nabla_{\nu}L_b\cdot\widehat{\del_t}\Psi
\\
& \quad - \frac{1}{2}g^{\mu\nu}\del_{\nu}\big(\underline{\pi[Z]}^{00}\big)\del_{\mu}\cdot\del_t\cdot\widehat{\del_t}\Psi 
- \frac{1}{2} g^{\mu\nu}\del_{\nu}\big(\underline{\pi[Z]}^{b0}\big)\del_{\mu}\cdot\delu_b\cdot\widehat{\del_t}\Psi
\\
& \quad - \frac{1}{2}g^{\mu\nu}\underline{\pi[Z]}^{\alpha0}\del_{\mu}\cdot\nabla_{\nu}\delu_{\alpha}\cdot\widehat{\del_t}\Psi
\\
& =:  T_{11} + T_{12}, 
\endaligned
\ee
where we have introduced the following quadratic contribution
\begin{equation}
T_{11} := \underline{\pi[Z]}^{00}\widehat{\del_t}\big(\widehat{\del_t}\Psi\big)
- \frac{1}{2}g^{\mu\nu}\del_{\nu}\big(\underline{\pi[Z]}^{00}\big)\del_{\mu}\cdot\del_t\cdot\widehat{\del_t}\Psi 
- \frac{1}{2} g^{\mu\nu}\del_{\nu}\big(\underline{\pi[Z]}^{b0}\big)\del_{\mu}\cdot\delu_b\cdot\widehat{\del_t}\Psi. 
\end{equation}
The remaining part, denoted by $T_{12}$, consists of cubic terms or terms with favorable decay due to the factor $t^{-1}$. We therefore arrive at the following decomposition.

\begin{proposition}[Quadratic--cubic decomposition of the double Dirac commutator]
\label{prop-quad-cubic}
For any sufficiently regular solution $\Psi$ to the Dirac equation~\eqref{eq9-21-aout-2025} and for any admissible vector field $Z$, one has  
\begin{equation}\label{eq7-22-aout-2025}
\opDirac([\widehat{Z},\opDirac]\Psi) - \mathrm{i}M[\widehat{Z},\opDirac]\Psi 
= Q[Z,\Psi] + R[Z,\Psi], 
\end{equation}
where 
\bel{equa-def-QR} 
\aligned
Q[Z,\Psi] & :=  \underline{\pi[Z]}^{00}\widehat{\del_t}\big(\widehat{\del_t}\Psi\big)
- \frac{1}{2}g^{\mu\nu}\del_{\nu}\big(\underline{\pi[Z]}^{00}\big)\del_{\mu}\cdot\del_t\cdot\widehat{\del_t}\Psi 
- \frac{1}{2} g^{\mu\nu}\del_{\nu}\big(\underline{\pi[Z]}^{b0}\big)\del_{\mu}\cdot\delu_b\cdot\widehat{\del_t}\Psi,
\\
R[Z,\Psi]  & :=  t^{-1}\underline{\pi[Z]}^{b0}\widehat{L_b}\big(\widehat{\del_t}\Psi\big)
+\frac{1}{4}g^{\mu\nu}\underline{\pi[Z]}^{00}\del_{\mu}\cdot\nabla_{\nu}\del_t\cdot\widehat{\del_t}\Psi
+\frac{1}{4t}g^{\mu\nu}\underline{\pi[Z]}^{b0}\del_{\mu}\cdot\nabla_{\nu}L_b\cdot\widehat{\del_t}\Psi
\\
& \quad - \frac{1}{2}g^{\mu\nu}\underline{\pi[Z]}^{\alpha0}\del_{\mu}\cdot\nabla_{\nu}\delu_{\alpha}\cdot\widehat{\del_t}\Psi
+\frac{1}{2}\underline{\pi[Z]}^{\alpha0}\delu_{\alpha}\cdot[\opDirac,\widehat{\del_t}]\Psi
+\frac{1}{2t}\underline{\pi[Z]}^{\alpha b}\delu_{\alpha}\cdot[\opDirac,\widehat{L_b}]\Psi
\\
& \quad - \frac{1}{2}\{\opDirac,t^{-1}\underline{\pi[Z]}^{\alpha b}\delu_{\alpha}\cdot\}\widehat{L_b}\Psi
+ \frac{1}{4}\{\opDirac, \pi[Z]^{\alpha\beta}\nabla_{\alpha}\del_{\beta}\cdot\}\Psi 
+ \frac{1}{4}\{\opDirac,[Z,W]\cdot\}\Psi
\\
& \quad + \frac{1}{8}\{\opDirac, g^{\mu\nu}\underline{\pi[Z]}^{\alpha0}\delu_{\alpha}\cdot\del_{\mu}\cdot\nabla_{\nu}\del_t\cdot\}\Psi
+ \frac{1}{8}\{\opDirac,t^{-1}g^{\mu\nu}\underline{\pi[Z]}^{\alpha b}\delu_{\alpha}\cdot\del_{\mu}\cdot\nabla_{\nu}L_b\cdot\}\Psi.
\endaligned
\ee
\end{proposition}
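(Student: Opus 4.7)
The plan is to view the identity as a direct, if elaborate, consequence of three ingredients that are already in place: the semi-hyperboloidal-frame expansion of $[\widehat{Z},\opDirac]\Psi$ that follows from Proposition~\ref{prop1-16-july-2025} via Lemma~\ref{lem6-25-aout-2025}, the Clifford anti-commutator identities of Lemma~\ref{lem1-19-aout-2025}, and the Dirac equation $\opDirac\Psi+\mathrm{i}M\Psi=0$. The strategy is to compute $\opDirac([\widehat{Z},\opDirac]\Psi)$ and $-\mathrm{i}M[\widehat{Z},\opDirac]\Psi$ separately, sum, use the Dirac equation to cancel the matching principal terms, and finally isolate the genuine top-order quadratic terms from everything else.

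First, I would multiply the semi-hyperboloidal commutator expansion by $-\mathrm{i}M$, producing the split $-\mathrm{i}M[\widehat{Z},\opDirac]\Psi=S_{1}+S_{2}+S_{3}+S_{W}$ displayed in \eqref{eq4-20-aout-2025}, where $S_{1},S_{2}$ collect the quadratic contributions in $\widehat{\del_{t}}\Psi$ and $\widehat{L_{b}}\Psi$, $S_{3}$ collects the cubic Christoffel-type pieces, and $S_{W}$ is the wave gauge term. Second, I would apply $\opDirac$ to the same commutator expansion. Each individual term has the form $A\cdot\Phi$ for some vector or triple product of vectors $A$ and $\Phi\in\{\widehat{\del_{t}}\Psi,\widehat{L_{b}}\Psi,\Psi\}$; the anti-commutator identities \eqref{eq5-22-aout-2025}--\eqref{eq4-22-aout-2025} allow me to rewrite $\opDirac(A\cdot\Phi)=-A\cdot\opDirac\Phi+\{\opDirac,A\cdot\}\Phi$, producing the long expansion \eqref{eq6-22-aout-2025}.

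Third, I would add the two sides. The key cancellation is that the terms $A\cdot\opDirac\Psi$ become $-\mathrm{i}MA\cdot\Psi$ and match exactly the terms in $S_{1}+S_{2}+S_{3}+S_{W}$, which therefore disappear. For $\Phi=\widehat{\del_{t}}\Psi$ or $\widehat{L_{b}}\Psi$ one additionally rewrites $A\cdot\opDirac\widehat{\del_{t}}\Psi=-\mathrm{i}MA\cdot\widehat{\del_{t}}\Psi+A\cdot[\opDirac,\widehat{\del_{t}}]\Psi$ (and similarly for $\widehat{L_{b}}$); the $\mathrm{i}M$-pieces cancel against the corresponding contributions from $-\mathrm{i}M[\widehat{Z},\opDirac]\Psi$, while the $[\opDirac,\widehat{\del_{t}}]\Psi$ and $[\opDirac,\widehat{L_{b}}]\Psi$ pieces are collected into the remainder. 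What survives is precisely $T_{1}+\cdots+T_{7}+T_{W}$ as in the text.

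Finally, I would expand $T_{1}$ once more using \eqref{eq5-22-aout-2025}, writing $\opDirac(\underline{\pi[Z]}^{\alpha 0}\delu_{\alpha})=g^{\mu\nu}\del_{\nu}(\underline{\pi[Z]}^{\alpha 0})\del_{\mu}\cdot\delu_{\alpha}+\underline{\pi[Z]}^{\alpha 0}g^{\mu\nu}\del_{\mu}\cdot\nabla_{\nu}\delu_{\alpha}$, so as to separate the three genuine top-order quadratic terms (the undifferentiated $\underline{\pi[Z]}^{00}\widehat{\del_{t}}(\widehat{\del_{t}}\Psi)$ and the two $g^{\mu\nu}\del_{\nu}(\underline{\pi[Z]}^{\cdot 0})$ contractions) from the pieces that carry either a weight $t^{-1}$ or a Levi-Civita derivative $\nabla_{\nu}\delu_{\alpha}$ of a frame vector. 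Setting $Q[Z,\Psi]$ equal to the three top-order terms and $R[Z,\Psi]$ equal to everything else reproduces exactly \eqref{equa-def-QR}. The main obstacle is purely organizational bookkeeping: ensuring that every contribution in \eqref{eq6-22-aout-2025} is accounted for, that each cancellation is effected by the Dirac equation rather than by inspection, and that the split $T_{1}=T_{11}+T_{12}$ correctly sends only the derivatives landing on the scalar coefficient $\underline{\pi[Z]}^{\alpha 0}$ into $Q$, while derivatives of $\delu_{\alpha}$ and all $t^{-1}$-weighted pieces are relegated to $R$.
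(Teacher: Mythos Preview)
Your proposal is correct and follows precisely the paper's own argument: the decomposition \eqref{eq4-20-aout-2025} of $-\mathrm{i}M[\widehat{Z},\opDirac]\Psi$, the expansion \eqref{eq6-22-aout-2025} of $\opDirac([\widehat{Z},\opDirac]\Psi)$ via the anti-commutator identities of Lemma~\ref{lem1-19-aout-2025}, the cancellation through the Dirac equation yielding $T_1+\cdots+T_7+T_W$, and the final split $T_1=T_{11}+T_{12}$ that isolates $Q[Z,\Psi]$. The only point worth emphasizing is that the paper organizes the cancellation by writing $A\cdot\widehat{\del_t}(\opDirac\Psi)$ rather than $A\cdot\opDirac(\widehat{\del_t}\Psi)$ directly, but this is the same manipulation you describe.
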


}

%--------------------------------------------------------------------------------------------------------------------------------- 

\subsection{ Estimates of quadratic terms}

{

We continue our analysis with the study of $Q[Z,\Psi]$ in \eqref{equa-def-QR}. 

\begin{lemma}\label{lem1-25-aout-2025}
Provided the spacetime metric is $(\eps_s,\delta,\kappa)$--flat, one has for any integer $p\leq N-4$ 
\begin{equation}
\aligned
\big[Q[Z,\Psi]\big]_{p,k} & \lesssim_p \big(|\del H|_{p+1} + \zetab^{-1}|\del\Hu^{00}|_{p+1} + s^{-1}|H|_{p+2}\big)[\del\Psi]_{p,k}
\\
& \quad +\left\{
\aligned
& \big(|\del\Hu^{00}|_p + t^{-1}|H|_p\big) [\Psi]_{p,k},\quad &&Z = \del_{\delta},
\\
& \sum_{p_1+p_2\leq p\atop k_1+k_2\leq k}[\Psi]_{p_1,k_1}|L_a\Hu^{00}|_{p_2,k_2},\quad &&Z = L_a,
\endaligned
\right.
\endaligned
\end{equation}
where one has distinguished between translation and boost vectors $Z$. 
\end{lemma}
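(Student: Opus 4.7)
My plan is to apply a high-order Leibniz expansion to each of the three summands defining $Q[Z,\Psi]$ in \eqref{equa-def-QR}, and to control the resulting pieces via the product rule of Proposition~\ref{prop1-15-july-2025}, the Clifford multiplication bounds of Lemma~\ref{lem1'-18-july-2025}, and the deformation-tensor estimates of Corollary~\ref{cor1-21-aout-2025}. The key structural feature is that $Q$ has been arranged so that only the semi-hyperboloidal components $\underline{\pi[Z]}^{00}$ and $\underline{\pi[Z]}^{b0}$ appear; the frame-conversion relations \eqref{eq3-13-june-2025}--\eqref{eq4-13-june-2025} together with \eqref{eq1-18-june-2025} then allow me to trade degrees of homogeneity for derivatives of the natural-frame components and, when the $00$-component is isolated, to retain the favorable light-bending weight $\zetab$.

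\textbf{Terms 2 and 3 (cubic pieces).} These have the schematic form $g^{\mu\nu}\del_{\nu}\bigl(\underline{\pi[Z]}^{\alpha 0}\bigr)\,\del_{\mu}\!\cdot\! v\!\cdot\! \widehat{\del_t}\Psi$ with $v\in\{\del_t,\delu_b\}$, and since $\del_\mu\cdot$ and $v\cdot$ are Clifford multiplications only one genuine differentiation falls on the spinor, which places the spinor factor inside $[\del\Psi]_{p,k}$. By Lemma~\ref{lem1'-18-july-2025} the chain $\del_\mu\!\cdot\!\del_t$ costs a $\zetab^{-1}$ weight, whereas $\del_\mu\!\cdot\!\delu_b = t^{-1}\,\del_\mu\!\cdot\! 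L_b$ carries the favorable $t^{-1}$ and no $\zetab^{-1}$ loss. The coefficient $\del_\nu(\underline{\pi[Z]}^{\alpha 0})$ is two derivatives past $H$, yielding the generic weight $|\del H|_{p+1}$ via Corollary~\ref{cor1-21-aout-2025}; however, when the $00$-component is singled out, the sharper $\zetab^{-1}|\del\Hu^{00}|_{p+1}$ is obtained. The additional $s^{-1}|H|_{p+2}$ contribution is produced when $\del_\nu$ falls on one of the $(x^a/t)$-type homogeneous coefficients of the frame-change matrices $\Psib,\Phib$ of \eqref{eq1-13-june-2025}, yielding a $t^{-1}\lesssim s^{-1}$ factor that lowers by one the number of derivatives actually landing on $H$.

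\textbf{Term 1 and main obstacle.} Term 1 is the genuinely quadratic contribution $\underline{\pi[Z]}^{00}\,\widehat{\del_t}(\widehat{\del_t}\Psi)$. Distributing $\mathscr{Z}^I$ of type $(p,k)$ by \eqref{eq1-17-july-2025} and viewing the spinor factor as $\widehat{\del_t}$ applied to the spinor $\widehat{\del_t}\Psi$, I commute the outer $\widehat{\del_t}$ past $\mathscr{Z}^J$ via Proposition~\ref{prop1-21-july-2025}, which reduces the spinor factor to $[\del\Psi]_{p,k}$ modulo commutator remainders absorbed into the $\Hcom$ contribution and into the $|\del H|_{p+1}[\del\Psi]_{p,k}$ budget. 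For the coefficient, in the $Z=L_a$ case the expansion \eqref{eq6-02-july-2025} isolates the principal $L_a\Hu^{00}$ together with pieces bounded by $|H|_{p_2,k_2}$, producing the sum $\sum [\Psi]_{p_1,k_1}|L_a\Hu^{00}|_{p_2,k_2}$; in the $Z=\del_\delta$ case the analogous identity gives $\del_\delta\Hu^{00}$ plus a residual dominated by $t^{-1}|H|_p$ from differentiating the $(x^a/t)$-factors of the frame change.

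\textbf{Principal difficulty.} The main obstacle is the reduction in Term 1 from the apparent $[\del\del\Psi]_{p,k}$ dependence to the claimed $[\del\Psi]_{p,k}$. This requires the commutator calculus of Proposition~\ref{prop1-21-july-2025} (to move the outer $\widehat{\del_t}$ through $\mathscr{Z}^I$ cleanly) together with a careful separation of the $\Hu^{00}$-component from the rest of $\underline{\pi[Z]}$, so that the weight $\zetab^{-1}$ attaches only to the component for which it is genuinely available after the frame conversion—this is exactly what produces the dichotomy between the two remainders listed for $Z=\del_\delta$ and $Z=L_a$.
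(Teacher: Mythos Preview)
There is a substantive gap in your treatment of Terms 2 and 3. Your claim that the Clifford chain $\del_\mu\cdot\delu_b$ ``carries the favourable $t^{-1}$ and no $\zetab^{-1}$ loss'' is mistaken: writing $\delu_b=t^{-1}L_b$ does not help, since as Clifford multipliers one has $|\del_\mu|_{\vec n}\lesssim\zetab^{-1}$ while $|\delu_b|_{\vec n}\lesssim 1$ (Lemma~\ref{lem1-23-july-2025}), so the product still costs a single $\zetab^{-1}$. Treated generically, Term 3 would yield $\zetab^{-1}|\del H|_{p+1}[\del\Psi]_{p,k}$, which is strictly worse than any term in the stated bound because $\underline{\pi[Z]}^{b0}$ does \emph{not} enjoy the special $\Hu^{00}$ structure of Proposition~\ref{prop1-24-aout-2025}. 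The paper's mechanism is to first pass \emph{both} the derivative and the Clifford factor to the semi-hyperboloidal frame, $g^{\mu\nu}\del_\nu(\cdot)\del_\mu=\gu^{\mu\nu}\delu_\nu(\cdot)\delu_\mu$, and split by $(\mu,\nu)$ into $T_0,T_1,T_2$ (resp.\ $S_0,S_1,S_2$): when $\mu=\nu=0$ the null coefficient $|\gu^{00}|_{p,k}\lesssim(s/t)$ compensates (and for Term 2 one uses the Clifford identity $\del_t\cdot\del_t=-g_{00}$, a scalar); when $\mu=c$ in Term 3 the chain becomes $\delu_c\cdot\delu_b$ with \emph{no} $\zetab^{-1}$ loss, which is precisely what produces the clean $|\del H|_{p+1}$ contribution; when $\nu=c$ the derivative $\delu_c=t^{-1}L_c$ supplies the $s^{-1}|H|_{p+2}$ piece; only the piece $\mu=d,\nu=0$ of Term 2 carries an uncompensated $\zetab^{-1}$, and there \eqref{eq9-24-aout-2025} gives the coefficient bound $|\del\Hu^{00}|_{p+1}$. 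Without this frame decomposition you cannot separate the three weights in the first line of the lemma.

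Your handling of Term 1 is also off. Commuting $\widehat{\del_t}$ through $\mathscr{Z}^J$ via Proposition~\ref{prop1-21-july-2025} merely \emph{reorders} derivatives---it does not reduce their number---so the spinor factor remains of size $[\del\del\Psi]_{p_2,k_2}$ either way, not $[\del\Psi]_{p,k}$. The paper attempts no such reduction: it applies the product rule \eqref{eq1-17-july-2025} directly to the scalar coefficient times the spinor $\widehat{\del_t}\widehat{\del_t}\Psi$, and invokes Lemma~\ref{lem1-24-aout-2025} (the identity $\underline{\pi[L_a]}^{00}=-L_a\Hu^{00}-2(x^a/t)\Hu^{00}$) together with \eqref{eq8-aout-2024} for the coefficient.
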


\begin{proof}
\bse
We will estimate each of the quadratic terms in in \eqref{equa-def-QR}, and prove that 
\begin{equation}\label{eq6-24-aout-2025}
\big[\,\underline{\pi[Z]}^{00}\widehat{\del_t}\big(\widehat{\del_t}\Psi\big)\big]_{p,k}
\lesssim_p 
\left\{
\aligned
& \big(|\del\Hu^{00}|_p + t^{-1}|H|_{p,k}\big) [\Psi]_p,\quad && Z = \del_{\delta},
\\
& \sum_{p_1+p_2\leq p\atop k_1+k_2\leq k}[\Psi]_{p_1,k_1}|L_a\Hu^{00}|_{p_2,k_2},\quad &&Z = L_a, 
\endaligned
\right.
\end{equation}
together with 
\begin{equation}\label{eq7-24-aout-2025}
\aligned
\big[g^{\mu\nu}\del_{\nu}\big(\underline{\pi[Z]}^{00}\big)\del_{\mu}\cdot\del_t\cdot\widehat{\del_t}\Psi \big]_{p,k}
& \lesssim_p  \big(|\del H|_{p+1} + \zetab^{-1}|\del\Hu^{00}|_{p+1}\ + s^{-1}|H|_{p+2}\big)[\del\Psi]_{p,k},
\endaligned
\end{equation}
and 
\begin{equation}\label{eq8-24-aout-2025}
\big[g^{\mu\nu}\del_{\nu}\big(\underline{\pi[Z]}^{b0}\big)\del_{\mu}\cdot\delu_b\cdot\widehat{\del_t}\Psi\big]_{p,k}
\lesssim_p(|\del H|_{p+1} + s^{-1}|H|_{p+2})[\del\Psi]_{p,k}.
\end{equation}
\ese
\bse
For~\eqref{eq6-24-aout-2025}, we apply the high-order estimates for products in~\eqref{eq1-17-july-2025} together with~\eqref{eq8-aout-2024}. Next, to derive~\eqref{eq7-24-aout-2025}, we decompose the expression as
\be
g^{\mu\nu}\del_{\nu}\big(\underline{\pi[Z]}^{00}\big)\del_{\mu}\cdot\del_t\cdot\widehat{\del_t}\Psi
=
\gu^{\mu\nu}\delu_{\nu}\big(\underline{\pi[Z]}^{00}\big)
\delu_{\mu}\cdot\del_t\cdot\widehat{\del_t}\Psi
=: T_0 + T_1 + T_2, 
\ee
where
\be
\aligned
T_0  & = \gu^{00}\del_t\big(\underline{\pi[Z]}^{00}\big)\del_t\cdot\del_t\cdot\widehat{\del_t}\Psi,
\\
T_1  & = \gu^{d0}\del_t\big(\underline{\pi[Z]}^{00}\big)\delu_d\cdot\del_t\cdot\widehat{\del_t}\Psi,
\\
T_2  & =  t^{-1}\gu^{0c}L_c\big(\underline{\pi[Z]}^{00}\big)\del_t\cdot\del_t\cdot\widehat{\del_t}\Psi
+ t^{-1}\gu^{dc}L_c\big(\underline{\pi[Z]}^{00}\big)\delu_d\cdot\del_t\cdot\widehat{\del_t}\Psi.
\endaligned
\ee
The term $T_0$ and $T_2$ are the easiest and we treat them first. For $T_0$, we observe that the coefficient $\gu^{00}$ enjoys the null condition, so 
\begin{equation}\label{eq10-24-aout-2025}
|\gu^{00}|_{p,k}\lesssim_p (s/t)^2 + |\Hu|_{p,k}\lesssim_p (s/t).
\end{equation}
Moreover, the term $T_2$ contains a decay factor $t^{-1}$ and we thus arrive directly at 
\begin{equation}
[T_0]_{p,k}\lesssim_p (s/t)\big(|\del H|_{p+1} + t^{-1}|H|_p\big)[\del\Psi]_{p,k},
\end{equation}
where we used $\del_t\cdot\del_t = -g_{00}$, while 
\begin{equation}
[T_2]_{p,k}\lesssim_p \zetab^{-1}t^{-1}|H|_{p+2}[\del\Psi]_{p,k}\lesssim_p s^{-1}|H|_{p+2}[\del\Psi]_{p,k}.
\end{equation}
Here, we have applied a weaker version of~\eqref{eq8-23-aout-2025} and~\eqref{eq9-23-aout-2025}, namely 
\begin{equation}\label{eq11-24-aout-2025}
\aligned
\big|L_a\underline{\pi[Z]}^{\alpha\beta}\big|_{p,k} & \lesssim_p  |H|_{p+2},
\\
\big|\del\big(\underline{\pi[Z]}^{\alpha\beta}\big)\big|_{p,k} & \lesssim_p 
|\del H|_{p+1} + t^{-1}|H|_p.
\endaligned
\end{equation}
Finally, we turn our attention to the critical term $T_1$ and, thanks to~\eqref{eq9-24-aout-2025}, we have 
\be
\aligned
\big[\gu^{d0}\del_t\big(\underline{\pi[Z]}^{00}\big)\delu_d\cdot\del_t\cdot\widehat{\del_t}\Psi\big]_{p,k}
& \lesssim_p  \zetab^{-1}\sum_{p_1+p_2\leq p\atop k_1+k_2\leq k}
[\del\Psi]_{p_1,k_1}\big|\gu^{d0}\del_t\big(\underline{\pi[Z]}^{00}\big)\big|_{p_2,k_2}
\\
& \lesssim_p  \zetab^{-1}\big(|\del\Hu^{00}|_{p+1} + t^{-1}|H|_p\big)[\del\Psi]_{p,k}. 
\endaligned
\ee
This completes the derivation of \eqref{eq7-24-aout-2025},. 
\ese
\bse
Finally, the estimate~\eqref{eq8-24-aout-2025} is established similarly, since in view of~\eqref{eq11-24-aout-2025}:
\be
\aligned
g^{\mu\nu}\del_{\nu}\big(\underline{\pi[Z]}^{b0}\big)\del_{\mu}\cdot\delu_b\cdot\widehat{\del_t}\Psi
=\gu^{\mu\nu}\delu_{\nu}\big(\underline{\pi[Z]}^{b0}\big)\delu_{\mu}\cdot\delu_b\cdot\widehat{\del_t}\Psi
=: S_0 + S_1 + S_2, 
\endaligned
\ee
with
\be
\aligned
S_0  & :=  \gu^{00}\del_t\big(\underline{\pi[Z]}^{b0}\big)\del_t\cdot\delu_b\cdot\widehat{\del_t}\Psi,
\\
S_1  & :=  \gu^{c0}\del_t\big(\underline{\pi[Z]}^{b0}\big)\delu_c\cdot\delu_b\cdot\widehat{\del_t}\Psi,
\\
S_2  & :=  t^{-1}\gu^{\mu d}L_d\big(\underline{\pi[Z]}^{b0}\big)\delu_\mu\cdot\delu_b\cdot\widehat{\del_t}\Psi.
\endaligned
\ee
We have now 
\be
\aligned
\, [S_0]_{p,k} & \lesssim_p\big((s/t)|\del H|_{p+1} + t^{-1}|H|_p\big)[\del\Psi]_{p,k},
\\
\, [S_2]_{p,k} & \lesssim_p\zetab^{-1}t^{-1}|H|_{p+1}[\del\Psi]_{p,k}\lesssim_ps^{-1}|H|_{p+2}[\del\Psi]_{p,k}
\endaligned
\ee
However, this time the critical term $S_1$ is easier  to handle, due to the fact that
\begin{equation}
\big[\delu_c\cdot\delu_b\cdot\widehat{\del_t}\Psi\big]_{p,k}\lesssim_p[\del\Psi]_{p,k}.
\end{equation}
This is due to Corollary~\ref{cor1-18-july-2025} and Lemma~\ref{eq2-17-july-2025}. That is the reason why $\underline{\pi[Z]}^{b0}$ does not enjoy~\eqref{prop1-24-aout-2025}, yet one arrive also at a satisfactory estimate. In fact by~\eqref{eq11-24-aout-2025},
\be
[S_1]_{p,k}\lesssim_p (|\del H|_{p+1} + t^{-1}|H|_p)[\del\Psi]_{p,k}. \qedhere
\ee
\ese
\end{proof}

}

%----------------------------------------------------------------------------------------------------------------------------

\subsection{ Estimates of the remaining terms}

{ 

We now turn our attention to the study of $R[Z,\Psi]$ in \eqref{equa-def-QR}. The terms can be classified into three groups, namely 
\bel{equa-Rone}
\aligned
R_1[Z,\Psi] :& =  t^{-1}\underline{\pi[Z]}^{b0}\widehat{L_b}\big(\widehat{\del_t}\Psi\big)
+\frac{1}{4}g^{\mu\nu}\underline{\pi[Z]}^{00}\del_{\mu}\cdot\nabla_{\nu}\del_t\cdot\widehat{\del_t}\Psi
+\frac{1}{4t}g^{\mu\nu}\underline{\pi[Z]}^{b0}\del_{\mu}\cdot\nabla_{\nu}L_b\cdot\widehat{\del_t}\Psi
\\
& \quad - \frac{1}{2}g^{\mu\nu}\underline{\pi[Z]}^{\alpha0}\del_{\mu}\cdot\nabla_{\nu}\delu_{\alpha}\cdot\widehat{\del_t}\Psi
+\frac{1}{2}\underline{\pi[Z]}^{\alpha0}\delu_{\alpha}\cdot[\opDirac,\widehat{\del_t}]\Psi
+\frac{1}{2t}\underline{\pi[Z]}^{\alpha b}\delu_{\alpha}\cdot[\opDirac,\widehat{L_b}]\Psi,
\endaligned
\ee
\bel{equa-Rtwo}
R_2[Z,\Psi]:=- \frac{1}{2}\{\opDirac,t^{-1}\underline{\pi[Z]}^{\alpha b}\delu_{\alpha}\cdot\}\widehat{L_b}\Psi
+ \frac{1}{4}\{\opDirac, \pi[Z]^{\alpha\beta}\nabla_{\alpha}\del_{\beta}\cdot\}\Psi 
+ \frac{1}{4}\{\opDirac,[Z,W]\cdot\}\Psi,
\ee
and
\bel{equa-Rthree}
R_3[Z,\Psi] := \frac{1}{8}\{\opDirac, g^{\mu\nu}\underline{\pi[Z]}^{\alpha0}\delu_{\alpha}\cdot\del_{\mu}\cdot\nabla_{\nu}\del_t\cdot\}\Psi
+ \frac{1}{8}\{\opDirac,t^{-1}g^{\mu\nu}\underline{\pi[Z]}^{\alpha b}\delu_{\alpha}\cdot\del_{\mu}\cdot\nabla_{\nu}L_b\cdot\}\Psi.
\ee
We first estimate the remainder $R_1[\Psi]$.

\begin{lemma}\label{lem2-25-aout-2025}
Assume that the spacetime metric is $(\eps_s,\delta,\kappa)$--flat. Then for any integer $p\leq N-4$ one has 
\begin{equation}\label{eq3-25-aout-2025}
\aligned
\big[R_1[Z,\Psi]\big]_{p,k}
& \lesssim_N   \big(t^{-1}|H|_{N-3} + |H|_{N-3}|\del H|_{N-4} \big)[\del\Psi]_{p+1}
\\
& \quad + \big(|H|_{N-3}|\del H|_{N-4} + |W|_{N-3}\big)|H|_{N-3}[\Psi]_{p+1}.
\endaligned
\end{equation}
\end{lemma}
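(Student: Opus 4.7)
The target estimate follows by treating the six contributions defining $R_1[Z,\Psi]$ in \eqref{equa-Rone} separately, observing that each has one of three structural types: (i) a product of a component of $\underline{\pi[Z]}^{b0}$ with a second-order modified derivative carrying an explicit $t^{-1}$ prefactor; (ii) a cubic expression combining $\underline{\pi[Z]}^{\alpha 0}$ with one connection coefficient of the form $\nabla_\nu\del_t$, $\nabla_\nu L_b$ or $\nabla_\nu\delu_\alpha$, contracted via Clifford multiplication with $\widehat{\del_t}\Psi$; or (iii) a contraction of $\underline{\pi[Z]}^{\alpha 0}$ (respectively $t^{-1}\underline{\pi[Z]}^{\alpha b}$) with the Dirac commutator $[\opDirac,\widehat{\del_t}]\Psi$ (respectively $[\opDirac,\widehat{L_b}]\Psi$). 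The main tools are the high-order product estimates of Proposition~\ref{prop1-15-july-2025}, the Dirac-commutator bound of Proposition~\ref{prop1-23-july-2025}, the triple-Clifford controls of Lemma~\ref{lem1-01-aout-2025}, and the bounds on $\underline{\pi[Z]}^{\alpha\beta}$ recorded in \eqref{eq11-24-aout-2025} together with Corollary~\ref{cor1-21-aout-2025}.

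For the type-(i) term $t^{-1}\underline{\pi[Z]}^{b0}\widehat{L_b}(\widehat{\del_t}\Psi)$, the Leibniz estimate \eqref{eq3-16-july-2025} combined with $|\underline{\pi[Z]}^{b0}|_{p,k}\lesssim |H|_{p+1,k+1}$ produces the contribution $t^{-1}|H|_{N-3}[\del\Psi]_{p+1}$, where the explicit $t^{-1}$ factor is precisely what absorbs the extra boost on the right-hand side of \eqref{eq3-25-aout-2025}. For the type-(ii) terms, we apply \eqref{eq3-23-july-2025} to control the triple Clifford products $\del_\mu\cdot\nabla_\nu\del_t\cdot\widehat{\del_t}\Psi$ and $\del_\mu\cdot\nabla_\nu L_b\cdot\widehat{\del_t}\Psi$ (invoking \eqref{eq2-31-july-2025} for the $\nabla_\nu L_b$ factor, which is where the factor $t|\del H|$ gets split into a $t^{-1}$ saving when multiplied against the outside $t^{-1}$); the resulting schematic bound $|\underline{\pi[Z]}^{\alpha 0}|_{p_1}|\del H|_{p_2}[\del\Psi]_{p_3}$ is then dispatched via \eqref{eq1-17-july-2025} and the $(\eps_s,\delta,\kappa)$-flatness, yielding $|H|_{N-3}|\del H|_{N-4}[\del\Psi]_{p+1}$.

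For the type-(iii) commutator terms, we use Proposition~\ref{prop1-23-july-2025} at zero order to expand $[\opDirac,\widehat{\del_t}]\Psi$ and $[\opDirac,\widehat{L_b}]\Psi$ into a principal part $\pi[\cdot]^{\alpha\beta}\del_\alpha\cdot\nabla_\beta\Psi$, a curvature-gauge remainder involving $\nabla_\alpha\del_\beta$, and a wave-gauge contribution $[\cdot,W]\cdot\Psi$; multiplying by $\underline{\pi[Z]}^{\alpha 0}$ (respectively $t^{-1}\underline{\pi[Z]}^{\alpha b}$) and applying \eqref{eq1-17-july-2025} together with \eqref{eq3-23-july-2025} to the resulting quartic products gives both of the terms $|H|_{N-3}|\del H|_{N-4}[\del\Psi]_{p+1}$ and $|W|_{N-3}|H|_{N-3}[\Psi]_{p+1}$ announced in \eqref{eq3-25-aout-2025}. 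The main obstacle is purely bookkeeping: in each of the resulting quartic (and occasionally quintic) sums over admissible partitions $I = I_1\odot I_2\odot\cdots$ one must keep exactly one factor at the top order $N-3$ while forcing all other factors below $N-4$, and then use the flatness hypothesis \eqref{eq-spin-condition} to absorb the $\zetab^{-1}$ losses produced by the Clifford-product estimates. No new geometric identity is required; the proof reduces to a systematic, if somewhat tedious, application of the product and commutator toolkit of Sections~\ref{section=N8} and~\ref{sec1-20-aout-2025}.
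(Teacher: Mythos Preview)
Your decomposition into types (i)--(iii) and the toolkit you cite (Propositions~\ref{prop1-15-july-2025}, \ref{prop1-23-july-2025}, the Clifford bounds of Lemma~\ref{lem1-01-aout-2025}, and~\eqref{eq8-23-aout-2025}) are the right ingredients, and your treatment of the first term is essentially correct---though note that $t^{-1}\underline{\pi[Z]}^{b0}\widehat{L_b}(\widehat{\del_t}\Psi)$ is a \emph{scalar} times a spinor, so the relevant product rule is~\eqref{eq1-17-july-2025}, not~\eqref{eq3-16-july-2025}.

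The gap lies in how you eliminate the factor $\zetab^{-1}$ produced by the Clifford estimates \eqref{eq3-23-july-2025} in types~(ii) and~(iii). You propose to ``use the flatness hypothesis \eqref{eq-spin-condition} to absorb the $\zetab^{-1}$ losses,'' but this does not yield the stated bound. Flatness gives at best $|H|_{N-3}\lesssim\eps_s\zeta\le\eps_s\zetab$ (cf.~\eqref{eq-UK-condition}), so from $\zetab^{-1}|H|_{N-3}|\del H|_{N-4}[\del\Psi]_{p,k}$ you would obtain $\eps_s|\del H|_{N-4}[\del\Psi]_{p,k}$, which is \emph{not} dominated by $|H|_{N-3}|\del H|_{N-4}[\del\Psi]_{p+1}$ (you have replaced the factor $|H|_{N-3}$ by the larger constant $\eps_s$, not the other way around). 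The actual mechanism, which you do not invoke, is Proposition~\ref{prop1-24-july-2025}: because $\Psi$ solves the \emph{massive} Dirac equation one has $[\Psi]_{q,l}\lesssim_{M}\zetab\,[\Psi]_{q+1,l+1}$, and hence $\zetab^{-1}[\del\Psi]_{p,k}$ (resp.\ $\zetab^{-1}[\Psi]_{p,k}$) is controlled by $[\del\Psi]_{p+1}$ (resp.\ $[\Psi]_{p+1}$). This is precisely why the conclusion is stated with the index $p+1$ rather than $p$, and why the restriction $p\le N-4$ is imposed. Without this Dirac-based trade of regularity for $\zetab$, your argument does not close to the form~\eqref{eq3-25-aout-2025}.
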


\begin{proof} We rely on the estimate~\eqref{eq8-23-aout-2025}. For the last two terms in \eqref{equa-Rone}, we apply the following estimate due to~\eqref{eq5-21-aout-2025}:
\begin{equation}
\aligned
& \big|\del_{\alpha}\cdot[\opDirac,\widehat{L_a}]\Psi\big|_{p,k}
\lesssim_p \zetab^{-1}|H|_{p+1}[\del\Psi]_{p,k}
+\zetab^{-1}\big(|H|_{p+1}|\del H|_p + |W|_{p+1}\big) [\Psi]_{p,k},
\\
& \big|\del_{\alpha}\cdot[\opDirac,\widehat{\del_{\delta}}]\Psi\big|_{p,k}
\lesssim_p \zetab^{-1}\big(|\del H|_p+t^{-1}|H|_p\big)[\del\Psi]_{p,k}
+\zetab^{-1}\big(|H|_{p+1}|\del H|_p + |W|_{p+1}\big) [\Psi]_{p,k},
\endaligned
\end{equation}
All of these terms ---but the first one--- can be bounded by $\zetab^{-1}[\Psi]_{p,k}$ or $\zetab^{-1}[\del\Psi]_{p,k}$ multiplied by $|H|_{N-3},|\del H|_{N-4}, |W|_{N-3}$ or $t^{-1}$. For such terms we apply Proposition~\ref{prop1-24-july-2025} and compensate the factor $\zetab^{-1}$ at the cost of replacing $[\del\Psi]_{p,k}$ by the next order term $[\del\Psi]_{p+1,k+1}$ and its control by $[\del\Psi]_{p+1}$. The same argument applies to $[\Psi]_{p,k}$.
\end{proof}

%-------------------------- 

The terms in $R_2,R_3$ in \eqref{equa-Rtwo} and \eqref{equa-Rthree} are anti-commutators, and we can apply the Clifford--Dirac anti-commutator identity in Lemma~\ref{lem1-19-aout-2025}. Before going into the derivation of the estimates, we need several observations. For each of two terms in $R_3[Z,\Psi]$,  we need to bound the corresponding nine terms in the right-hand side of~\eqref{eq4-22-aout-2025}. To this end, we remark that
\be
\aligned
& \{\opDirac,t^{-1}g^{\mu\nu}\underline{\pi[Z]}^{\alpha b}\delu_{\alpha}\cdot\del_{\mu}\cdot\nabla_{\nu}L_b\cdot\}\Psi
\\
& = \{\opDirac,t^{-1}g^{\mu\nu}\underline{\pi[Z]}^{\alpha b}\delu_{\alpha}\cdot\del_{\mu}\cdot\nabla_{\nu}(x^b\del_t + t\del_b)\cdot\}\Psi
\\
& = \delta_{\nu}^b\{\opDirac,t^{-1}g^{\mu\nu}\underline{\pi[Z]}^{\alpha b}\delu_{\alpha}\cdot\del_{\mu}\cdot\del_t\cdot\}\Psi
+\{\opDirac,(x^b/t)g^{\mu\nu}\underline{\pi[Z]}^{\alpha b}\delu_{\alpha}\cdot\del_{\mu}\cdot\nabla_{\nu}\del_t\cdot\}\Psi
\\
& \quad +\delta_{\nu}^0\{\opDirac,t^{-1}g^{\mu\nu}\underline{\pi[Z]}^{\alpha b}\delu_{\alpha}\cdot\del_{\mu}\cdot\del_b\cdot\}\Psi
+\{\opDirac,g^{\mu\nu}\underline{\pi[Z]}^{\alpha b}\delu_{\alpha}\cdot\del_{\mu}\cdot\nabla_{\nu}\del_b\cdot\}\Psi
\\
& = \delta_{\nu}^b\{\opDirac,t^{-1}g^{\mu\nu}\underline{\pi[Z]}^{\alpha b}\delu_{\alpha}\cdot\del_{\mu}\cdot\del_t\cdot\}\Psi
+\delta_{\nu}^0\{\opDirac,t^{-1}g^{\mu\nu}\underline{\pi[Z]}^{\alpha b}\delu_{\alpha}\cdot\del_{\mu}\cdot\del_b\cdot\}\Psi
\\
& \quad +\{\opDirac,(x^b/t)g^{\mu\nu}\Gamma_{\nu0}^{\gamma}\underline{\pi[Z]}^{\alpha b}\delu_{\alpha}\cdot\del_{\mu}\cdot\del_{\gamma}\cdot\}\Psi
+\{\opDirac,g^{\mu\nu}\Gamma_{\nu b}^{\gamma}\underline{\pi[Z]}^{\alpha b}\delu_{\alpha}\cdot\del_{\mu}\cdot\del_{\gamma}\cdot\}\Psi
\endaligned
\ee
and 
\be
\{\opDirac, g^{\mu\nu}\underline{\pi[Z]}^{\alpha0}\delu_{\alpha}\cdot\del_{\mu}\cdot\nabla_{\nu}\del_t\cdot\}\Psi
=\{\opDirac, g^{\mu\nu}\Gamma_{\nu 0}^{\gamma}\underline{\pi[Z]}^{\alpha0}\delu_{\alpha}\cdot\del_{\mu}\cdot\del_{\gamma}\cdot\}\Psi.
\ee
In view of the three-vector notation in~\eqref{eq4-22-aout-2025}, we thus need to consider 
\be
\aligned
&X =
\begin{cases}
t^{-1}g^{\mu\nu}\underline{\pi[Z]}^{\alpha\beta}\delu_{\alpha},\quad & \text{Case A},
\\
ug^{\mu\nu}\Gamma_{\nu\delta}^{\gamma}\underline{\pi[Z]}^{\alpha\beta}\delu_{\alpha},\quad & \text{Case B}.
\end{cases}
\\
&Y = \del_{\delta},
\quad
Z = \del_{\gamma}.
\endaligned
\ee
Here, $ u = 1, (x^b/t)$ are homogeneous coefficients of degree zero. Thus we need to bound the terms in the right-hand side of~\eqref{eq4-22-aout-2025}. Among these terms the estimates on the following two quantities are the most involved: $\nabla_X\Psi$ and $\opDirac(X)\cdot\Psi$. (The notions of homogeneity were introduced in Section~\ref{section===83}.)

For terms in $R_2$, the structure is similar and we also need to bound
$\nabla_X\Psi$ and $\opDirac(X)\cdot\Psi$, now 
with 
\be
X = 
\begin{cases}
t^{-1}\underline{\pi[Z]}^{\alpha b}\delu_{\alpha},\quad & \text{Case C},
\\
\pi[Z]^{\alpha\beta}\Gamma_{\alpha\beta}^{\gamma}\del_{\gamma},\quad & \text{Case D}.
\end{cases}
\ee
We write $X = X^{\alpha}\del_{\alpha}$. Then recall that the elements of the transition matrices $\Psiu_{\mu}^{\alpha}$ are interior-homogeneous of degree zero. Thus $X^{\alpha}$ are finite linear combinations of the following quantities with interior-homogeneous coefficients:
\begin{equation}\label{eq2-26-aout-2025}
X^{\alpha}\simeq 
\begin{cases}
t^{-1}g^{\mu\nu}\pi[Z]^{\alpha\beta},\quad & \text{Case A},
\\
g^{\mu\nu}\Gamma_{\nu\delta}^{\gamma}\pi[Z]^{\alpha\beta},\quad & \text{Case B},
\\
t^{-1}\pi[Z]^{\alpha\beta},\quad & \text{Case C},
\\
\pi[Z]^{\alpha\beta}\Gamma_{\alpha\beta}^{\gamma},\quad & \text{Case D}.
\end{cases}
\end{equation}
This leads us, for any integer $p\leq N-4$, to the bound
\begin{equation}\label{eq2-25-aout-2025}
|X^{\alpha}|_{p,k}\lesssim_p 
\begin{cases}
t^{-1}|H|_{p+1} ,\quad & \text{Case A,C},
\\
|\del H|_p|H|_{p+1},\quad & \text{Case B,D}, 
\end{cases}
\end{equation}
where we used the condition~\eqref{eq-UK-condition}  and~\eqref{eq1-23-aout-2025}.

Then we establish the following intermediate estimates about $\nabla_X\Psi$.

\begin{lemma}\label{lem1-26-aout-2025}
Assume that the spacetime metric is  $(\eps_s,\delta,\kappa)$--flat. Then for any $p\leq N-4$ one has 
\begin{equation}
\aligned
\big[\nabla_X\Psi\big]_{p,k}& \lesssim_N  \big(t^{-1} + |\del H|_{N-4}\big)|H|_{N-3}[\del\Psi]_{p}
\\
& \quad +\big(t^{-1} + |\del H|_{N-4}\big)|H|_{N-3}|\del H|_{N-4}[\Psi]_{p+1}.
\endaligned
\end{equation}
\begin{equation}
\aligned
\big[\del_{\mu}\cdot\del_{\nu}\cdot\nabla_X\Psi\big]_{p,k}
& \lesssim_N  \big(t^{-1} + |\del H|_{N-4}\big)|H|_{N-3}[\del\Psi]_{p+1}
\\
& \quad +\big(t^{-1} + |\del H|_{N-4}\big)|H|_{N-3}|\del H|_{N-4}[\Psi]_{p+1},
\endaligned
\end{equation}
\end{lemma}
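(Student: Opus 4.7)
\textbf{Proof plan for Lemma~\ref{lem1-26-aout-2025}.} The plan is to unpack $\nabla_X\Psi$ in terms of the modified derivative $\widehat{\del_\alpha}$, reduce the full high-order norm to products of a scalar coefficient and a commuted spinor, and then exploit the uniform coefficient bound \eqref{eq2-25-aout-2025}. Writing $X=X^\alpha\del_\alpha$, we have
\begin{equation*}
\nabla_X\Psi \;=\; X^\alpha\nabla_{\del_\alpha}\Psi \;=\; X^\alpha\,\widehat{\del_\alpha}\Psi \;+\; \tfrac{1}{4}\,X^\alpha g^{\mu\nu}\,\del_\mu\cdot\nabla_\nu\del_\alpha\cdot\Psi,
\end{equation*}
so $\nabla_X\Psi$ splits into a \emph{principal} piece (a scalar function multiplying a first modified derivative of $\Psi$) and a \emph{cubic} piece (of Christoffel type, containing one extra factor of $\del H$).

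For the principal piece $X^\alpha\widehat{\del_\alpha}\Psi$, I would apply the Moser-type product estimate \eqref{eq1-17-july-2025} of Proposition~\ref{prop1-15-july-2025} together with the coefficient bound \eqref{eq2-25-aout-2025}. The key observation is that, inspecting Cases A--D in \eqref{eq2-26-aout-2025}, all norms $|X^\alpha|_{p_1,k_1}$ with $p_1\le p\le N-4$ are controlled uniformly by $(t^{-1}+|\del H|_{N-4})|H|_{N-3}$, using \eqref{eq-UK-condition} and \eqref{eq1-23-aout-2025} to transfer excess derivatives. The standard half-integer split in \eqref{eq1-17-july-2025} then places derivatives either on $X^\alpha$ (controlled as above) or on $\widehat{\del_\alpha}\Psi\le [\del\Psi]_p$; in Cases B and D, the factor $|\del H|_{N-4}$ explicitly appears in the coefficient, and when it is transferred to the spinor slot by the same product rule it can be reabsorbed as $|\del H|_{N-4}|H|_{N-3}[\Psi]_{p+1}$ using Proposition~\ref{prop1-24-july-2025}.

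For the cubic piece $X^\alpha g^{\mu\nu}\del_\mu\cdot\nabla_\nu\del_\alpha\cdot\Psi$, I rewrite $g^{\mu\nu}\nabla_\nu\del_\alpha = g^{\mu\nu}\Gamma_{\nu\alpha}^\gamma\del_\gamma$, use \eqref{eq5-31-july-2025} to bound $|\Gamma_{\nu\alpha}^\gamma|_{p,k}\lesssim|\del H|_{p,k}$, and invoke \eqref{eq3-23-july-2025} to push the two Clifford factors past $\Psi$ at the cost of a single $\zetab^{-1}$. Combined with the coefficient bound on $X^\alpha$, this yields a contribution of the form $\zetab^{-1}(t^{-1}+|\del H|_{N-4})|H|_{N-3}|\del H|_{N-4}[\Psi]_p$. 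The $\zetab^{-1}$ loss is then absorbed via Proposition~\ref{prop1-24-july-2025}, which costs one extra unit of order on $\Psi$, producing the stated term $(t^{-1}+|\del H|_{N-4})|H|_{N-3}|\del H|_{N-4}[\Psi]_{p+1}$.

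For the second inequality on $\bigl[\del_\mu\cdot\del_\nu\cdot\nabla_X\Psi\bigr]_{p,k}$, the argument is identical up to one extra application of \eqref{eq3-23-july-2025} (or equivalently Corollary~\ref{cor1-18-july-2025}) to commute the two additional Clifford factors through $\nabla_X\Psi$: this again loses a factor $\zetab^{-1}$, which is paid off by Proposition~\ref{prop1-24-july-2025} by raising the spinor order from $[\del\Psi]_p$ to $[\del\Psi]_{p+1}$, matching the displayed right-hand side. The main technical obstacle will be the bookkeeping in the product rule: one must verify that the admissible partitions in \eqref{eq1-17-july-2025}, applied to each of the four Cases A--D of \eqref{eq2-26-aout-2025}, combine with the treating conditions \eqref{eq-UK-condition} and the decay \eqref{eq-China-condition} so as to collapse uniformly to the single prefactor $(t^{-1}+|\del H|_{N-4})|H|_{N-3}$, without spurious loss in derivatives of $H$ at top order.
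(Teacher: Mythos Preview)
Your proposal is correct and follows essentially the same approach as the paper: decompose $\nabla_X\Psi = X^\alpha\widehat{\del_\alpha}\Psi + \tfrac14 g^{\mu\nu}X^\alpha\del_\mu\cdot\nabla_\nu\del_\alpha\cdot\Psi$, apply the product rule \eqref{eq1-17-july-2025} together with the coefficient bound \eqref{eq2-25-aout-2025}, then use \eqref{eq3-23-july-2025} for the Clifford factors and Proposition~\ref{prop1-24-july-2025} to trade each $\zetab^{-1}$ for one extra order on the spinor. One small clarification: in the principal piece for Cases B and D there is no $\zetab^{-1}$ loss and no need to invoke Proposition~\ref{prop1-24-july-2025}---the term $|\del H|_p|H|_{p+1}[\del\Psi]_{p,k}$ already sits inside the first displayed term $(t^{-1}+|\del H|_{N-4})|H|_{N-3}[\del\Psi]_p$, so your sentence about ``transferring $|\del H|_{N-4}$ to the spinor slot'' is unnecessary (and would produce the wrong term).
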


\begin{proof}
We recall that $\nabla_X\Psi = X^{\alpha}\widehat{\del_{\alpha}}\Psi + \frac{1}{4}g^{\mu\nu}X^{\alpha}\del_{\mu}\cdot\nabla_{\nu}\del_{\alpha}\cdot\Psi$, which together with~\eqref{eq2-25-aout-2025} implies 
\begin{equation}
\big[\nabla_X\Psi\big]_{p,k}\lesssim_p 
\begin{cases}
t^{-1}|H|_{p+1}[\del\Psi]_{p,k} + t^{-1}\zetab^{-1}|\del H|_p|H|_{p+1}[\Psi]_{p,k},\quad & \text{Case A,C},
\\
|\del H|_p|H|_{p+1}[\del\Psi]_{p,k} + \zetab^{-1}|\del H|_p^2|H|_{p+1}[\Psi]_{p,k},\quad & \text{Case B,D}.
\end{cases}
\end{equation}
Furthermore, thanks to~\eqref{eq3-23-july-2025} and~\eqref{eq-UK-condition} together with~\eqref{eq2-25-aout-2025} (similar to the argument for~\eqref{eq1-26-aout-2025}),
$$
\big[\del_{\mu}\cdot\del_{\nu}\cdot\nabla_X\Psi\big]_{p,k}
\lesssim_p
\begin{cases}
\zetab^{-1}t^{-1}|H|_{p+1}[\del\Psi]_{p,k} + t^{-1}\zetab^{-1}|\del H|_p|H|_{p+1}[\Psi]_{p,k},\quad & \text{Case A,C},
\\
\zetab^{-1}|\del H|_p|H|_{p+1}[\del\Psi]_{p,k} + \zetab^{-1}|\del H|_p^2|H|_{p+1}[\Psi]_{p,k},\quad & \text{Case B,D}.
\end{cases}
$$
Then we apply the pointwise inequality in Proposition~\ref{prop1-24-july-2025} to compensate for the factor $\zetab^{-1}$.
\end{proof}

Next, we proceed with the analysis of $\opDirac(X)\cdot\Psi$.

\begin{lemma}\label{lem2-26-aout-2025}
Assume that the spacetime metric is  $(\eps_s,\delta,\kappa)$--flat. Then for $p\leq N-4$ one has 
\begin{equation}
\aligned
\big[\opDirac(X)\cdot\Psi\big]_{p,k}
\lesssim_N
&(|H|_{N-3}+t^{-1})|\del H|_{N-3}[\Psi]_{p+1} + t^{-2}|H|_{N-3}[\Psi]_{p+1}
\endaligned
\end{equation}
and 
\begin{equation}
\big[\del_{\mu}\cdot\del_{\nu}\cdot\opDirac(X)\cdot\Psi\big]_{p,k}\lesssim_N (|H|_{N-3}+t^{-1})|\del H|_{N-3}[\Psi]_{p+1} + t^{-2}|H|_{N-3}[\Psi]_{p+1}.
\end{equation}
\end{lemma}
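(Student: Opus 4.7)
The strategy is to expand $\opDirac(X)$ explicitly in abstract coordinates and then apply the high-order Clifford product bound \eqref{eq3-23-july-2025} to deal with the spinor factor. Writing $X=X^{\alpha}\del_{\alpha}$ and using $\nabla_{\nu}X=\big(\del_{\nu}X^{\alpha}+X^{\gamma}\Gamma_{\nu\gamma}^{\alpha}\big)\del_{\alpha}$, one has
\begin{equation*}
\opDirac(X)\cdot\Psi=g^{\mu\nu}\big(\del_{\nu}X^{\alpha}+X^{\gamma}\Gamma_{\nu\gamma}^{\alpha}\big)\del_{\mu}\cdot\del_{\alpha}\cdot\Psi,
\end{equation*}
so that the claimed estimates reduce to controlling a scalar coefficient times a double (respectively quadruple) Clifford product of translations acting on $\Psi$. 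The bound \eqref{eq3-23-july-2025}, combined with the product rule \eqref{eq1-17-july-2025} and the uniform estimate $|H|_{[p/2]+1}\lesssim\eps_s$, shows that in both estimates each Clifford string contributes the same factor $\zetab^{-1}[\Psi]_{p,k}$ modulo lower-order terms involving $|H|_{p,k}$.

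First I would estimate the scalar coefficient $Y^{\alpha}_{\nu}:=\del_{\nu}X^{\alpha}+X^{\gamma}\Gamma_{\nu\gamma}^{\alpha}$ in each of the four cases listed in \eqref{eq2-26-aout-2025}. Differentiating \eqref{eq2-25-aout-2025} once and combining with the first-order bound \eqref{eq11-24-aout-2025} on $\del\underline{\pi[Z]}^{\alpha\beta}$ as well as with $|\Gamma_{\nu\gamma}^{\alpha}|_{p,k}\lesssim|\del H|_{p,k}$ yields, for all $p\le N-4$,
\begin{equation*}
|Y^{\alpha}_{\nu}|_{p,k}\;\lesssim_N\;
t^{-2}|H|_{N-3}+t^{-1}|\del H|_{N-3}+|H|_{N-3}|\del H|_{N-3},
\end{equation*}
uniformly across cases (A)--(D). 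In the cases (B),(D) the coefficient already carries a factor of $\Gamma\sim\del H$, so a single further derivative produces contributions of the schematic form $|\del H|^{2}$, $|H||\del\del H|$, and $|H||\del H|^{2}$, all of which are dominated by $|H|_{N-3}|\del H|_{N-3}$ after absorbing one factor of $|\del H|_{N-3}$ or $|\del\del H|_{N-3}$ using the smallness hypothesis \eqref{eq-UK-condition}.

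Next, I would insert this bound into \eqref{eq3-23-july-2025} (with $n=2$ for the first estimate and $n=4$ for the second), which gives a single $\zetab^{-1}$ loss in both cases:
\begin{equation*}
\big[\opDirac(X)\cdot\Psi\big]_{p,k}+\big[\del_{\mu}\cdot\del_{\nu}\cdot\opDirac(X)\cdot\Psi\big]_{p,k}\;\lesssim_N\;\zetab^{-1}\Big(\big(|H|_{N-3}+t^{-1}\big)|\del H|_{N-3}+t^{-2}|H|_{N-3}\Big)[\Psi]_{p,k}.
\end{equation*}
Finally, the factor $\zetab^{-1}$ is absorbed via the pointwise inequality of Proposition~\ref{prop1-24-july-2025}, which upgrades $\zetab^{-1}[\Psi]_{p,k}$ to $[\Psi]_{p+1,k+1}$ and hence to $[\Psi]_{p+1}$. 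Putting these steps together yields exactly the two claimed estimates.

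The main technical obstacle will be case (B) in the coefficient estimate: the expansion of $\del Y^{\alpha}_{\nu}$ produces genuinely quadratic contributions in $\del H$ of the form $\del(g^{\mu\nu}\Gamma_{\nu\delta}^{\gamma})\cdot\pi[Z]^{\alpha\beta}$, which must be reorganized into the bilinear pattern $|H||\del H|+t^{-1}|\del H|+t^{-2}|H|$ allowed by the statement. Care is needed to keep track of which factor retains its full $\|\cdot\|_{N-3}$ norm and which absorbs the $\eps_s$ smallness, so that no term is left in the cubic form $|\del H|^{3}$ that would not fit on the right-hand side. Once this bookkeeping is carried out, the remaining steps are routine applications of the product rules already established.
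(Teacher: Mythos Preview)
Your approach is essentially the paper's: expand $\opDirac(X)\cdot\Psi$ as a scalar coefficient times a double Clifford product, bound the coefficient in the four cases, apply \eqref{eq3-23-july-2025} to pick up a single $\zetab^{-1}$, and trade $\zetab^{-1}[\Psi]_{p,k}$ for $[\Psi]_{p+1}$ via Proposition~\ref{prop1-24-july-2025}. The paper splits the coefficient as $T_1+T_2$ (Christoffel part and $\del_\nu X^\alpha$ part) rather than combining them as your $Y^\alpha_\nu$, but this is cosmetic.

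There is one genuine error in your justification. In cases (B),(D) you correctly identify terms of the form $|\del H|^2$ and $|H||\del\del H|$, but you then claim these are dominated by $|H|_{N-3}|\del H|_{N-3}$ ``after absorbing one factor of $|\del H|_{N-3}$ or $|\del\del H|_{N-3}$ using the smallness hypothesis \eqref{eq-UK-condition}''. That does not work: smallness gives $|\del H|^2\lesssim\eps_s|\del H|$, and $\eps_s$ is \emph{not} bounded by $|H|_{N-3}$ (indeed $|H|\lesssim\eps_s$, not conversely). The correct mechanism, which the paper uses implicitly, is the norm inclusion $|\del u|_q\le|u|_{q+1}$ coming from $\del_\gamma\in\mathscr{Z}$: thus
\[
|\del H|_{p+1}\,|\del H|_p\le|\del H|_{p+1}\,|H|_{p+1},\qquad
|H|_{p+1}\,|\del\del H|_p\le|H|_{p+1}\,|\del H|_{p+1},
\]
and with $p\le N-4$ both are $\lesssim|H|_{N-3}|\del H|_{N-3}$. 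Your smallness argument is legitimate only for the genuinely cubic term $|H||\del H|^2$, where absorbing one $|\del H|\lesssim\eps_s\lesssim 1$ into the implicit constant leaves $|H||\del H|$. Once this correction is made, the rest of your proof goes through and matches the paper's.
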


\begin{proof}
We obsere that
\be
\aligned
\opDirac(X)  & =  g^{\mu\nu}X^{\alpha}\del_{\mu}\cdot\nabla_{\nu}\del_{\alpha}\cdot\Psi 
+ g^{\mu\nu}\del_{\nu}(X^{\alpha})\del_{\mu}\cdot\del_{\alpha}\cdot\Psi
\\
& = g^{\mu\nu}X^{\alpha}\Gamma_{\nu\alpha}^{\gamma}\del_{\mu}\cdot\del_{\gamma}\cdot\Psi
+g^{\mu\nu}\del_{\nu}(X^{\alpha})\del_{\mu}\cdot\del_{\alpha}\cdot\Psi
=: T_1+T_2.
\endaligned
\ee
The term $T_1$ is bounded by~\eqref{eq2-25-aout-2025}.
For the term $T_2$, thanks to~\eqref{eq2-26-aout-2025} we find 
\begin{equation}
\big|\del_{\nu}(X^{\alpha})\big|_{p,k}
\lesssim
\begin{cases}
t^{-1}|\del H|_{p+1} + t^{-2}|H|_{p+1},\quad & \text{Case A,C},
\\
|\del H|_{p+1}|\del H|_p + |H|_{p+1}|\del\del H|_p,\quad & \text{Case B,D},
\end{cases}
\end{equation}
which gives us the desired bound.
\end{proof}

We are finally in a position to control $R_2[Z,\Psi]$ defined in \eqref{equa-Rtwo}. 

\begin{lemma}\label{lem3-25-aout-2025}
Assume that the spacetime metric is  $(\eps_s,\delta,\kappa)$--flat. Then for $p\leq N-4$, one has 
\begin{equation}
\aligned
\big[R_2[Z,\Psi]\big]_{p,k}& \lesssim_N 
(t^{-1}+|\del H|_{N-4})|H|_{N-3}[\del\Psi]_{p+1} 
+ t^{-1}(|\del H|_{N-3} + t^{-1}|H|_{N-3})[\Psi]_{p+1}
\\
& \quad + |W|_{N-3}[\del\Psi]_{p+1} + \big(|\del W|_{N-3} + |\del H|_{N-4}|W|_{N-3}\big)[\Psi]_{p+1}.
\endaligned
\end{equation}
\end{lemma}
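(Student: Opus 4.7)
The strategy is to reduce every term in $R_2[Z,\Psi]$ to one of the two basic quantities $\nabla_X\Phi$ or $\opDirac(X)\cdot\Phi$ (possibly composed with Clifford products), and then invoke the already established estimates in Lemmas~\ref{lem1-26-aout-2025} and~\ref{lem2-26-aout-2025}. The mechanism is provided by the Clifford--Dirac anti-commutator identity~\eqref{eq5-22-aout-2025}, namely $\{\opDirac,X\cdot\}\Phi = -2\nabla_X\Phi + \opDirac(X)\cdot\Phi$, applied successively with $\Phi=\widehat{L_b}\Psi$, $\Phi=\Psi$ and $\Phi=\Psi$ in the three terms defining $R_2[Z,\Psi]$ in~\eqref{equa-Rtwo}.

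First, for $-\tfrac{1}{2}\{\opDirac,t^{-1}\underline{\pi[Z]}^{\alpha b}\delu_\alpha\cdot\}\widehat{L_b}\Psi$, I identify $X^\alpha$ as an interior-homogeneous multiple of $t^{-1}\pi[Z]^{\alpha\beta}$ (Case~C in~\eqref{eq2-26-aout-2025}), so that Lemmas~\ref{lem1-26-aout-2025} and~\ref{lem2-26-aout-2025}, now applied to the spinor $\widehat{L_b}\Psi$, deliver a factor $(t^{-1}+|\del H|_{N-4})|H|_{N-3}$ multiplying $[\del\widehat{L_b}\Psi]_p$, and the analogous zero-order contribution. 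The commutator estimate Lemma~\ref{lem2-18-july-2025} and its consequence~\eqref{eq3-09-oct-2025(l)} allow me to convert these into $[\del\Psi]_{p+1}$ and $[\Psi]_{p+1}$ with only a cubic remainder $\Hcom$, which is subsumed into the existing right-hand side under~\eqref{eq-UK-condition}. The second term, which is Case~D with $X^\alpha = \pi[Z]^{\alpha\beta}\Gamma_{\alpha\beta}^\gamma\del_\gamma$, is handled identically by the same two lemmas applied to $\Psi$ itself, producing the prefactor $|\del H|_{N-4}|H|_{N-3}$ in front of $[\del\Psi]_{p+1}$ and the quadratic factor $t^{-1}|\del H|_{N-3}+t^{-2}|H|_{N-3}$ multiplying $[\Psi]_{p+1}$.

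For the $W$--dependent term $\tfrac{1}{4}\{\opDirac,[Z,W]\cdot\}\Psi$, I apply~\eqref{eq5-22-aout-2025} with $X=[Z,W]$. The commutator satisfies $|[Z,W]|_{p,k}\lesssim_p |W|_{p+1,k+1}$ for $Z\in\{L_a,\Omega_{ab}\}$ and $|[Z,W]|_{p,k}\lesssim_p|\del W|_{p,k}$ for $Z=\del_\delta$, as already used in~\eqref{eqs1-07-oct-2025}. The contribution $\nabla_{[Z,W]}\Psi$ therefore generates $|W|_{N-3}[\del\Psi]_{p+1}$ (after applying the Leibniz rule~\eqref{eq1-17-july-2025} and the Clifford product estimate~\eqref{eq3-23-july-2025} to the lower-order piece $g^{\mu\nu}\del_\mu\cdot\nabla_\nu[Z,W]\cdot\Psi$), while $\opDirac([Z,W])\cdot\Psi$ produces $|\del W|_{N-3}[\Psi]_{p+1}$ from the divergence of the components together with $|\del H|_{N-4}|W|_{N-3}[\Psi]_{p+1}$ from the Christoffel symbols hidden inside the Dirac operator acting on the vector $[Z,W]$. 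Summing the three contributions yields precisely the stated bound.

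The main obstacle is bookkeeping the derivative count: the anti-commutator identity moves one derivative from $\opDirac$ onto $\Psi$ in the $\nabla_X\Psi$ piece, while the coefficients $\pi[Z]^{\alpha\beta}$ and $[Z,W]$ already carry one derivative of $H$ or $W$ by Corollary~\ref{cor1-21-aout-2025}. The hypothesis $p\leq N-4$ leaves exactly the slack needed to absorb the extra $\widehat{L_b}$ in Case~C without exceeding $p+1\leq N-3$ on the spinor side and $N-3$ on the metric side; the factor $\zetab^{-1}$ appearing systematically in the Clifford-product estimates is then compensated by one application of the pointwise inequality of Proposition~\ref{prop1-24-july-2025}, at the cost of raising the differentiation index of $\Psi$ by one, which is harmless because the final statement allows norms up to order $p+1$.
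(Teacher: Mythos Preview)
Your approach is exactly the paper's: reduce each of the three anticommutator terms in $R_2[Z,\Psi]$ via the identity~\eqref{eq5-22-aout-2025} to $\nabla_X\Phi$ and $\opDirac(X)\cdot\Phi$, invoke Lemmas~\ref{lem1-26-aout-2025}--\ref{lem2-26-aout-2025} for Cases~C and~D, and expand the $W$-term into $T_1=-2\nabla_{[Z,W]}\Psi$ and $T_2=\opDirac([Z,W])\cdot\Psi$ precisely as the paper writes out in detail. One small caveat in your final paragraph: for the first term (Case~C with spinor $\widehat{L_b}\Psi$) the $\widehat{L_b}$ already raises the spinor index to $p+1$, so a further $\zetab^{-1}$-compensation via Proposition~\ref{prop1-24-july-2025} would land at $[\Psi]_{p+2}$ rather than $[\Psi]_{p+1}$; however, the $\zetab^{-1}$ here only multiplies coefficients already carrying an extra $|\del H|$ factor (from the Christoffel symbols), so one may instead absorb it via $\zetab^{-1}|\del H|_{N-4}\lesssim\eps_s$ at the level of the metric coefficient---the paper's one-sentence citation of the two lemmas is equally silent on this bookkeeping detail.
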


\begin{proof}
\bse
The first two terms are bounded by Lemmas~\ref{lem1-26-aout-2025} and \ref{lem2-26-aout-2025}. For the third term which concerns the generalized wave gauge condition, we remark that
\be
[Z,W] = Z(W^{\alpha})\del_{\alpha} - W^{\beta}\del_{\beta}(Z^{\alpha})\del_{\alpha}.
\ee
For admissible vector fields, $\del_{\beta}Z^{\alpha}$ are constants. Thus
\begin{equation}\label{eq3-26-aout-2025}
|[Z,W]|_{p,k}\lesssim_p|W|_{p+1}.
\end{equation}
On the other hand, thanks to~\eqref{eq5-22-aout-2025},
\be
\{\opDirac,[Z,W]\cdot\}\Psi = -2\nabla_{[Z,W]}\Psi + \opDirac([Z,W])\cdot\Psi =: T_1+T_2.
\ee

To control $T_1$, we write 
\be
\aligned
\nabla_{[Z,W]}\Psi  & =  Z(W^{\alpha})\nabla_{\alpha}\Psi 
+ W^{\beta}\del_{\beta}(Z^{\alpha})\nabla_{\alpha}\Psi
\\
& = Z(W^{\alpha})\widehat{\del_{\alpha}}\Psi 
+ \frac{1}{4}g^{\mu\nu}Z(W^{\alpha})\del_{\mu}\cdot\nabla_{\nu}\del_{\alpha}\cdot\Psi 
\\
& \quad +W^{\beta}\del_{\beta}(Z^{\alpha})\widehat{\del_{\alpha}}\Psi
+ \frac{1}{4}g^{\mu\nu}W^{\beta}\del_{\beta}(Z^{\alpha})\del_{\mu}\cdot\nabla_{\nu}\del_{\alpha}\cdot\Psi 
\endaligned
\ee
Then thanks to~\eqref{eq3-26-aout-2025} and similar to the previous estimates, we obtain 
\be
\aligned
\, [T_1]_{p,k}& \lesssim_N  |W|_{p+1}[\del\Psi]_{p} + \zetab^{-1}|W|_{p+1}|\del H|_p[\Psi]_{p}
\\
& \lesssim_N  |W|_{N-3}[\del\Psi]_{p+1} + |W|_{N-3}|\del H|_p[\Psi]_{p+1}.
\endaligned
\ee
For the term $T_2$ we observe that
\begin{equation}
|\del_{\nu}([Z,W]^{\alpha})|_{p,k}\lesssim |\del W|_{p+1},
\end{equation}
therefore 
\be
[T_2]_{p,k}\lesssim_N \zetab^{-1}\big(|W|_{p+1}|\del H|_p + |\del W|_{p+1}\big)[\Psi]_{p}
\lesssim_N\big(|W|_{N-3}|\del H|_{N-4} + |\del W|_{N-3}\big)[\Psi]_{p+1}.
\ee
\ese
\end{proof}

%-------------------------------------------------------------------------

The last issue is the control of $R_3[Z,W]$. The most involved terms are bounded by Lemmas~\ref{lem1-26-aout-2025} and~\ref{lem2-26-aout-2025}. We omit the details concerning the remaining terms and give directly the relevant estimate: 
\begin{equation}\label{eq4-26-aout-2025}
\aligned
\big[R_3[Z,\Psi]\big]_{p,k}& \lesssim_N  (t^{-1}+|\del H|_{N-4})|H|_{N-3}[\del\Psi]_{p+1} 
\\
& \quad + \big(t^{-2}|H|_{N-3}^2 + (t^{-1}|H|_{N-3} + |\del H|_{N-4})|H|_{N-3}|\del H|_{N-4}\big)[\Psi]_{p+1}.
\endaligned
\end{equation}
For the convenience of the reader (who wants to check each term one by one), we also provide here the required bounds: 
\begin{equation}
\aligned
& \, [\nabla_{\mu}\Psi]_{p,k}\lesssim_N [\del \Psi]_p + \zetab^{-1}|\del H|_{N-4}[\Psi]_{p},
\\
&|\opDirac(\del_{\mu})|_{p,k}\lesssim_N |\del H|_{N-4},
\\
&|\nabla_{\mu}\del_{\nu}|_{p,k}\lesssim|\del H|_{N-4}.
\endaligned
\end{equation}

}

%---------------------------------------------------------------------------------------

\subsection{ Closing the proof of Proposition~\ref{lem1-22-aout-2025}}

{ 

We first present the following conclusion.

\begin{lemma}\label{lem5-25-aout-2025}
Assume that the spacetime metric is  $(\eps_s,\delta,\kappa)$--flat. Then for any sufficiently regular solution $\Psi$ to the massive Dirac equation~\eqref{eq9-21-aout-2025} and for any integer $p\leq N-4$, one has 
\begin{equation}
\aligned
\big|\opDirac([Z,\opDirac]\Psi) - \mathrm{i}M[Z,\opDirac] \Psi\big|_{p,k}
& \lesssim_N T_{01}[H]\,[\del\Psi]_{p,k} + T_{00}[H]\,[\Psi]_{p,k} + S_{0p}[H,\Psi]
\\
& \quad +\left\{
\aligned
&0,\quad &&Z = \del_{\delta},
\\
& \sum_{p_1+p_2\leq p\atop k_1+k_2\leq k}[\Psi]_{p_1,k_1}|L_a\Hu^{00}|_{p_2,k_2},\quad &&Z = L_a,
\endaligned
\right.
\endaligned
\end{equation}
where
\be
\aligned
T_{01}[H]  & :=   s^{-1}|H|_{N-2} + |\del H|_{N-3} + \zetab^{-1}|\del\Hu^{00}|_{N-3},
\\
T_{00}[H]  & :=  |\del\Hu^{00}|_{N-4} + t^{-1}|H|_{N-4},
\endaligned
\ee
\be
S_{0p}[\Psi] := \big(t^{-1} + |\del H|_{N-4}\big)|H|_{N-3}[\del\Psi]_{p+1} 
+ \big(|\del W|_{N-3} + |W|_{N-3} + |\del H|_{N-4}|H|_{N-3}\big)[\Psi]_{p+1}.
\ee
\end{lemma}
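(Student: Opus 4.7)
The plan is to combine the structural decomposition of Proposition~\ref{prop-quad-cubic} with the term-by-term estimates established in Lemmas~\ref{lem1-25-aout-2025}, \ref{lem2-25-aout-2025}, \ref{lem3-25-aout-2025}, and in inequality~\eqref{eq4-26-aout-2025}. Since the adapted action of $Z$ on spinor fields is by convention~\eqref{eq3-04-july-2025} the Clifford-corrected derivative $\widehat{Z}$, Proposition~\ref{prop-quad-cubic} rewrites the left-hand side as $Q[Z,\Psi] + R_1[Z,\Psi] + R_2[Z,\Psi] + R_3[Z,\Psi]$, and the task reduces to summing the four corresponding $|\cdot|_{p,k}$ bounds.

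First I would apply Lemma~\ref{lem1-25-aout-2025} to the quadratic part $Q[Z,\Psi]$. Its factor linear in $[\del\Psi]_{p,k}$, namely $|\del H|_{p+1} + \zetab^{-1}|\del\Hu^{00}|_{p+1} + s^{-1}|H|_{p+2}$, is immediately absorbed into $T_{01}[H]$ since $p+1 \le N-3$ and $p+2 \le N-2$ for $p \le N-4$. The factor linear in $[\Psi]_{p,k}$ splits according to the nature of $Z$: for translations it is $|\del\Hu^{00}|_p + t^{-1}|H|_p$, which is dominated by $T_{00}[H]$; for boosts it produces the sum $\sum_{p_1+p_2\le p,\, k_1+k_2\le k}[\Psi]_{p_1,k_1}|L_a\Hu^{00}|_{p_2,k_2}$, which cannot be controlled by any $|\del H|$--type norm and must be kept explicit as the boost-specific remainder appearing in the statement.

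Next I would apply Lemmas~\ref{lem2-25-aout-2025}, \ref{lem3-25-aout-2025}, and~\eqref{eq4-26-aout-2025} to bound $R_1$, $R_2$, $R_3$, respectively. Each of these delivers estimates of the schematic form $(t^{-1} + |\del H|_{N-4})|H|_{N-3}[\del\Psi]_{p+1}$ together with cubic-in-$[\Psi]_{p+1}$ contributions involving $|H|_{N-3}$, $|\del H|_{N-3}$, $|\del W|_{N-3}$ and $|W|_{N-3}$. All these fall naturally within the definition of $S_{0p}[H,\Psi]$. The only verifications are elementary, such as noting that $t^{-2}|H|_{N-3}^2 \lesssim t^{-1}\big(t^{-1}|H|_{N-3}\big)|H|_{N-3}$ and that products of the form $|W|_{N-3}|\del H|_{N-4}|H|_{N-3}$ are subsumed by the $|W|_{N-3}[\Psi]_{p+1}$ piece of $S_{0p}$ after invoking the smallness condition~\eqref{eq-UK-condition}.

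The main point requiring care is the boost exception. It originates entirely from the term $\underline{\pi[L_a]}^{00}\widehat{\del_t}(\widehat{\del_t}\Psi)$ in $Q[L_a,\Psi]$: the Leibniz expansion in the product estimate~\eqref{eq1-17-july-2025} combined with Proposition~\ref{prop1-24-july-2025} (used to convert second-order spinor derivatives into lower-order norms of $\Psi$ at a mild cost in differentiation order) produces precisely the sum quoted in the statement, and the factor $L_a\Hu^{00}$ cannot be rewritten as a translation derivative of $H$. Aside from this bookkeeping, which is the only non-routine aspect, the proof is the straightforward assembly of the four prior estimates, and the assertion follows by adding them up and relabeling the multi-indices.
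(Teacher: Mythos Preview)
Your proposal is correct and matches the paper's approach exactly. The paper presents Lemma~\ref{lem5-25-aout-2025} as a direct synthesis (``We first present the following conclusion'') of Proposition~\ref{prop-quad-cubic} with Lemmas~\ref{lem1-25-aout-2025}, \ref{lem2-25-aout-2025}, \ref{lem3-25-aout-2025} and \eqref{eq4-26-aout-2025}, without writing out a separate proof; your plan carries out precisely this assembly.
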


Finally, we are ready to complete this section. 

\begin{proof}[Proof of Proposition~\ref{lem1-22-aout-2025}]
We proceed by induction on $|I|$. The case $|I| = 1$ is guaranteed by Lemma~\ref{lem5-25-aout-2025} by choosing $p=k=0$. We then consider the expression 
\bel{equa-T1T2T3}
\aligned
\opDirac\big([\mathscr{Z}^{I}Z,\opDirac]\Psi\big) - \mathrm{i}M[\mathscr{Z}^IZ,\opDirac]\Psi
& = 
\mathscr{Z}^I\big(\opDirac([Z,\opDirac]\Psi) - \mathrm{i}M[Z,\opDirac]\Psi\big)
+[\opDirac,\mathscr{Z}^I]([Z,\opDirac]\Psi)
\\
& \quad +\opDirac\big([\mathscr{Z}^I](Z\Psi)\big) - \mathrm{i}M[\mathscr{Z}^I,\opDirac](Z\Psi)
\\
& =:  T_1 + T_2 + T_3.
\endaligned
\end{equation}
We define 
\be
p' := \ord(\mathscr{Z}^IZ) = p+1,
\quad
k' := \rank(\mathscr{Z}^IZ) 
= \begin{cases}
k,\quad &Z = \del_{\delta},
\\
k+1,\quad &Z = L_a.
\end{cases}
\ee
To deal with $T_2$ in \eqref{equa-T1T2T3}, we observe first that it is a cubic term. We need the following weaker version of Proposition~\ref{prop1-19-july-2025}:
\begin{equation}
\zetab\big[[\mathscr{Z}^I,\opDirac]\Psi\big]_{p,k}
\lesssim_p \sum_{p_1+p_2\leq p\atop k_1+k_2\leq k}|\Psi|_{p_1,k_1-1}|H|_{p_2+1,k_2+1} 
+ [\Psi]_{p-1,k}|\del H|_p + \Hcom_{p-1}[\Psi] + \Hwave_{p-1}[\Psi],
\end{equation}
where $I$ is of type $(p,k)$, and a rough estimate on $[\widehat{Z},\opDirac]\Psi$ (which is a weaker version of Proposition~\ref{prop1-23-july-2025}): 
\begin{equation}
(s/t)\big[[\widehat{Z},\opDirac]\Psi\big]_{p,k}
\lesssim_p[\del\Psi]_{p,k}|H|_{p+1} + \Hcom_p[\Psi].
\end{equation} 
Then we find 
\begin{equation}
\aligned
\, [T_2]_{\vec{n}}& \lesssim_N  (s/t)^{-2}\hspace{-0.2cm}\sum_{p_1+p_2\leq N-4}\hspace{-0.4cm}
|H|_{p_1+1}|H|_{p_2+1} [\del\Psi]_{p,k} 
\\
& \quad + (s/t)^{-1}|H|_{N-3}|\del H|_{N-4}[\Psi]_{p+1} 
+ \zetab^{-1}\big(\Hcom_p[\Psi] + \Hwave_p[\Psi]\big).
\endaligned
\end{equation}
Here, for the second term we applied Proposition~\ref{prop1-24-july-2025}:
\be
[\del\Psi]_{p-1}\leq [\Psi]_p\lesssim \zetab[\Psi]_{p+1}.
\ee
Consequently, $[T_2]_{\vec{n}}$ fits the induction assumption.

We then apply Lemma~\ref{lem5-25-aout-2025} on $T_1$, and the assumption of induction on $T_3$: 
\\
$\bullet$ When $Z = \del_\delta$. In this case, $p' = p+1, k' = k$ and, by Lemma~\ref{lem5-25-aout-2025},
\be
\, [T_1]_{\vec{n}}\lesssim_N T_{01}[H][\del\Psi]_{p,k} + T_{00}[H]\,[\Psi]_{p,k} + S_{0p}[\Psi] 
\lesssim_N T_N[H][\Psi]_{p',k'} + S_{N,p'}[\Psi],
\ee
which fits the induction. We have 
\be
\aligned
\, [T_3]_{\vec{n}}& \lesssim_N  
\sum_{p_1+p_2\leq p\atop k_1+k_2\leq k}[\widehat{\del_{\delta}}\Psi]_{p_1,k_1}|LH|_{p_2-1,k_2-1} 
+T_N[H]\,[\widehat{\del_{\delta}}\Psi]_{p,k}
\\
& \quad + S_{N,p}[\widehat{\del_{\delta}}\Psi] + \Hcom_p[\widehat{\del_{\delta}}\Psi] + \Hwave_p[\widehat{\del_{\delta}}\Psi]
\\
& \lesssim_N  
\sum_{p_1+p_2\leq p\atop k_1+k_2\leq k}[\Psi]_{p_1+1,k_1}|LH|_{p_2-1,k_2-1} 
+T_N[H]\,[\Psi]_{p+1,k}
\\
& \quad + S_{N,p+1}[\Psi] + \Hcom_{p+1}[\Psi] + \Hwave_{p+1}[\Psi]
\\
& = \sum_{p_1'+p_2\leq p+1\atop k_1+k_2\leq k}[\Psi]_{p_1',k_1}|LH|_{p_2-1,k_2-1} 
+T_N[H]\,[\Psi]_{p',k'}
\\
& \quad + S_{N,p'}[\Psi] + \Hcom_{p'}[\Psi] + \Hwave_{p'}[\Psi], 
\endaligned
\ee
which fits the induction.

\noindent$\bullet$ When $Z = L_a$. In this case $p'=p+1,k'=k+1$, and 
\be
\aligned
\, [T_1]_{\vec{n}} & \lesssim_N  T_{01}[H]\,[\del\Psi]_{p,k} + T_{00}[H]\,[\Psi]_{p,k} + S_{0p}[\Psi]
+\sum_{p_1+p_2\leq p\atop k_1+k_2\leq k}[\Psi]_{p_1,k_1}|LH|_{p_2,k_2}
\\
& \lesssim_N  T_N[H]\,[\Psi]_{p,k} + S_{N,p}[\Psi]
+\sum_{p_1+p_2\leq p\atop k_1+k_2\leq k}[\Psi]_{p_1,k_1}|LH|_{p_2,k_2}
\\
& \lesssim_N T_N[H]\,[\Psi]_{p,k} + S_{N,p}[\Psi]
+\sum_{p_1+p_2'\leq p'\atop k_1+k_2'\leq k'}[\Psi]_{p_1,k_1}|LH|_{p_2'-1,k_2'-1},
\endaligned
\ee
which fits the induction. We find 
\be
\aligned
\, [T_3]_{\vec{n}}& \lesssim_N \sum_{p_1+p_2\leq p\atop k_1+k_2\leq k}[\widehat{L_a}\Psi]_{p_1,k_1}|LH|_{p_2-1,k_2-1} 
+T_N[H]\,[\widehat{L_a}\Psi]_{p,k}
\\
& \quad + S_{N,p}[\widehat{L_a}\Psi] + \Hcom_p[\widehat{L_a}\Psi] + \Hwave_p[\widehat{L_a}\Psi]
\\
& \lesssim_N \sum_{p_1+p_2\leq p\atop k_1+k_2\leq k}[\Psi]_{p_1+1,k_1+1}|LH|_{p_2-1,k_2-1} 
+T_N[H]\,[\Psi]_{p+1,k+1}
\\
& \quad + S_{N,p+1}[\Psi] + \Hcom_{p+1}[\Psi] + \Hwave_{p+1}[\Psi]
\\
& = \sum_{p_1'+p_2\leq p'\atop k_1'+k_2\leq k'}[\Psi]_{p_1',k_1'}|LH|_{p_2-1,k_2-1} 
+T_N[H]\,[\Psi]_{p',k'}
\\
& \quad + S_{N,p'}[\Psi] + \Hcom_{p'}[\Psi] + \Hwave_{p'}[\Psi], 
\endaligned
\ee
which fits the induction. This completes the proof of Proposition~\ref{lem1-22-aout-2025}. 
\end{proof}
}

%==========================================================================

\section{Einstein equations in a light-bending and wave-coordinate chart}
\label{section=N13}

\subsection{ Einstein equations in a general coordinate chart}
\label{subsec1-20-sept-2025}

{ 

\paragraph{Decomposition of the Ricci curvature.}

We essentially follow the notation in our previous work~\cite{PLF-YM-PDE}. We consider a spacetime $(\Mcal,g)$, in which $\Mcal$ is diffeomorphic to $\RR^4$ and the metric $g$ is Lorentzian. We assume\footnote{As we pointed out earlier, it is convenient to put a tilde on these coordinates, which will only be used temporarily.} that there is a global coordinate chart $\{\xsans^{\alpha}\}$, and denote by $\Mks$ the Minkowski metric with respect to $\{\xsans^{\alpha}\}$. Then we have 
\be 
\widetilde{\Mks}_{\alpha\beta} = \etasans_{\alpha\beta} 
= \begin{cases}
-1,\quad & \alpha=\beta =0,
\\
1,\quad & \alpha=\beta=1,2,3,
\\
0,\quad & \text{otherwise},
\end{cases}
\ee
and we set 
\be
\gsans_{\alpha\beta} := g(\delsans_{\alpha},\delsans_{\beta}),
\qquad
\Hsans_{\alpha\beta} := \gsans_{\alpha\beta}- \etasans_{\alpha\beta}.
\ee
We denote by $\Gammasans_{\alpha\beta}^{\gamma}$ the Christoffel symbols associated with the metric $g$ and the coordinates $\{\xsans^{\alpha}\}$, and we also set 
$\Gammasans^{\lambda}:= \gsans^{\alpha\beta}\Gammasans_{\alpha\beta}^{\lambda}$. We recall that the Ricci curvature $R[g] = \widetilde{R}[g]_{\alpha\beta}\diff\xsans^{\alpha}\otimes\diff\xsans^{\beta}$ associated with the metric $g$ reads 
\begin{subequations}\label{eqs1-exp}
\begin{equation} \label{eq1-13-sept-2025}
2\widetilde{R}[g]_{\alpha\beta} = - \gsans^{\mu\nu} \delsans_{\mu} \delsans_{\nu} \gsans_{\alpha\beta} 
+ \Fbb(\gsans,\delsans\gsans)_{\alpha\beta} + 2\mathbb{G}(\gsans,\delsans\gsans,\Gammasans,\delsans\Gammasans)_{\alpha\beta},
\end{equation}
where $\Gammasans_{\alpha} = \gsans_{\alpha\lambda}\Gammasans^{\lambda}$ and 
\be
2\mathbb{G}(\gsans,\delsans\gsans,\Gammasans,\delsans\Gammasans)_{\alpha\beta} 
:= (\delsans_{\alpha} \Gammasans_{\beta} + \delsans_{\beta} \Gammasans_{\alpha}) 
+ \Gammasans^{\delta} \delsans_{\delta}\gsans_{\alpha\beta} 
- \Gammasans_{\alpha} \Gammasans_{\beta},
\ee
\ese
and
\bse
\be
\Fbb_{\alpha\beta} = \Pbb_{\alpha\beta} + \Qbb_{\alpha\beta},
\ee
\begin{equation}\label{eq2-13-sept-2025} 
\aligned
& \Qbb_{\alpha\beta}(\gsans,\delsans \gsans) 
\\
& := 
\gsans^{\mu\mu'} \gsans^{\nu\nu'} \delsans_\mu \gsans_{\alpha\nu} \delsans_{\mu'} \gsans_{\beta\nu'}
- \gsans^{\mu\mu'} \gsans^{\nu\nu'} \big(\delsans_\mu \gsans_{\alpha\nu'} \delsans_\nu \gsans_{\beta\mu'} - \delsans_\mu \gsans_{\beta\mu'} \delsans_\nu \gsans_{\alpha\nu'} \big)
\\
& \quad + \gsans^{\mu\mu'} \gsans^{\nu\nu'} \big(\delsans_\alpha \gsans_{\mu\nu} \delsans_{\nu'} \gsans_{\mu'\beta} - \delsans_\alpha \gsans_{\mu'\beta} \delsans_{\nu'} \gsans_{\mu\nu} \big)
+ \frac{1}{2} \gsans^{\mu\mu'} \gsans^{\nu\nu'} \big(\delsans_\alpha \gsans_{\mu\beta} \delsans_{\mu'} \gsans_{\nu\nu'} - \delsans_\alpha \gsans_{\nu\nu'} \delsans_{\mu'} \gsans_{\mu\beta} \big)
\\
& \quad + \gsans^{\mu\mu'} \gsans^{\nu\nu'} \big(\delsans_\beta \gsans_{\mu\nu} \delsans_{\nu'} \gsans_{\mu'\alpha} - \delsans_\beta \gsans_{\mu'\alpha} \delsans_{\nu'} \gsans_{\mu\nu} \big)
+ \frac{1}{2} \gsans^{\mu\mu'} \gsans^{\nu\nu'} \big(\delsans_\beta \gsans_{\mu\alpha} \delsans_{\mu'} \gsans_{\nu\nu'} - \delsans_\beta \gsans_{\nu\nu'} \delsans_{\mu'} \gsans_{\mu\alpha} \big)
\endaligned
\end{equation} 
and
\begin{equation} \label{eq3-13-sept-2025} 
\aligned
\Pbb_{\alpha\beta} (\gsans,\delsans \gsans) 
:= - \frac{1}{2} \gsans^{\mu\mu'} \gsans^{\nu\nu'} \delsans_\alpha \gsans_{\mu\nu} \delsans_\beta \gsans_{\mu'\nu'} + \frac{1}{4} \gsans^{\mu\mu'} \gsans^{\nu\nu'} \delsans_\alpha \gsans_{\mu\mu'} \delsans_\beta \gsans_{\nu\nu'}.
\endaligned
\end{equation}
\ese
For the clarity in the presentation, when dealing with a tensor such as the Ricci curvature, we explicitly distinguish its dependence on the metric components $\gsans_{\alpha\beta}$ and of the contracted Christoffel symbols $\Gammasans^{\lambda}$, namely 
\be
\Rbb(\gsans,\delsans\gsans,\delsans\delsans\gsans,\Gammasans,\delsans\Gammasans)_{\alpha\beta} := - \frac{1}{2}\gsans^{\mu\nu}\delsans_{\nu}\delsans_{\mu}\gsans_{\alpha\beta}
+ \frac{1}{2}\Fbb(\gsans,\delsans\gsans)_{\alpha\beta} 
+ \mathbb{G}(\gsans,\delsans\gsans,\Gammasans,\delsans\Gammasans)_{\alpha\beta}.
\ee
Dropping the contribution of the contracted Christoffel symbols, we also introduce
\begin{equation}\label{eq9-13-sept-2025}
\Rbb^{\mathrm{w}}(\gsans,\delsans\gsans,\delsans\delsans\gsans)_{\alpha\beta}
:= - \frac{1}{2}\gsans^{\mu\nu}\delsans_{\nu}\delsans_{\mu}\gsans_{\alpha\beta}
+ \frac{1}{2}\Fbb(\gsans,\delsans\gsans)_{\alpha\beta}
\end{equation}
which is referred to as the \textbf{wave-reduced Ricci curvature}, i.e., the expression of the Ricci curvature when the coordinate chart $\{\xsans^{\alpha}\}$ satisfies the wave coordinate condition. Hence, the Ricci curvature can be expressed in terms of $(\gsans,\Gammasans)$ via ${\Rbb^{\mathrm{w}}}_{\alpha\beta}$. It is worth observing that the (second-order) principal part of the operator ${\Rbb^{\mathrm{w}}}(\gsans,\delsans \gsans,\delsans\delsans\gsans)_{\alpha\beta}$ forms a diagonal quasilinear wave operator acting on $\gsans_{\alpha\beta}$.

We recall the Einstein equation in the general from
\begin{equation}\label{eq4-13-sept-2025}
\widetilde{G}_{\alpha\beta} = \widetilde{T}_{\alpha\beta}, 
\qquad \widetilde{G}_{\alpha\beta} := \widetilde{R}_{\alpha\beta} - \frac{1}{2}R_g\gsans_{\alpha\beta}, 
\end{equation}
where $R_g$ denotes the scalar curvature associated with $g$. In turn,  thanks to \eqref{eqs1-exp}, the equations \eqref{eq4-13-sept-2025} can be written as
\begin{equation}\label{eq5-13-sept-2025}
\Rbb^{\mathrm{w}}(\gsans,\delsans\gsans,\delsans\delsans\gsans)_{\alpha\beta} 
= \Tsans_{\alpha\beta} -  
\frac{1}{2}\mathrm{Tr}_{\gsans}(\widetilde T)\gsans_{\alpha\beta}
- \mathbb{G}(\gsans,\delsans\gsans,\Gammasans,\delsans\Gammasans)_{\alpha\beta}.
\end{equation}

%------------------------------------------------------------

\paragraph{Admissible reference metric.}

As in the study of Einstein's vacuum equations~\cite{LR1,LR2,Lindblad-3} and in our earlier investigation of the Einstein–scalar field system, we consider a small perturbation around a \textit{reference metric}, which was the Schwarzschild metric in~\cite{LR1,LR2,Lindblad-3} and, more generally, a metric denoted by $\gref$. This reference metric plays a crucial role in the analysis of the Einstein equations, owing to the typically slow decay in the spatial directions. This slow decay is a consequence of the positive mass theorem, and the Schwarzschild decay rate $w{1/r}$ represents the fastest admissible decay. In the present analysis, we could essentially take an approximate solution to Einstein's vacuum equations by smoothly merging the Minkowski and Schwarzschild metrics, as proposed in~\cite{LR2}. However, it is convenient to allow for a slight generalization of this standard reference. 

To proceed, we also introduce the {\bf extended family of admissible vector fields}, defined by
\be
\mathscr{K}:=\mathscr{Z}\cup\{S\},\qquad S := t\del_t + x^a\del_a, 
\ee
which forms a family of conformal Killing vector fields for the Minkowski metric. We denote by 
$\mathscr{K}^I$ a differential operator of arbitrary order, built from vector fields in $\mathscr{K}$, and we refer to it as an \textbf{extended admissible operator}. We also introduce the notions of \textbf{order} and \textbf{rank} in this context: an extended admissible operator is said to be of order~$p$ if it is composed of~$p$ vector fields in~$\mathscr{K}$, and of rank~$k$ if it contains exactly~$k$ boosts, rotations, and/or occurrences of~$S$. For a scalar field~$u$, we use the notation $|u|^{\Kscr}_{p,k}$ as a natural extension of our earlier notation associated with~$\mathscr{Z}$. With the above preparations, we now introduce three conditions. 

\bei 

\item[$\bullet$] 
{\bf Asymptotically Minkowski condition.} 
We fix an integer~$N$ sufficiently large\footnote{Taking $N \geq 22$ is sufficient.} and, in the coordinates~$\{\xsans^{\alpha}\}$, we require the following decay on the metric components $\greft_{\alpha\beta} := \gref(\delsans_{\alpha},\delsans_{\beta})$:
\begin{equation}\label{eq2-14-sept-2025}
\aligned
&|\Hreft_{\alpha\beta}|^{\Kscr}_{N+2} \lesssim \epss\, (\tsans + \rsans + 1)^{-1+\theta}, 
\\
&|\delsans \widetilde{\slashed{H}}_{\mathrm{r}}| \lesssim \epss\, (\tsans + \rsans + 1)^{-1}
\qquad \text{near the light cone, i.e., for } (1- \ell)\tsans \leq \rsans \leq (1- \ell)^{-1}\tsans,
\endaligned
\end{equation}
where $0 < \ell < \tfrac{1}{2}$ and $\theta > 0$ denotes a sufficiently small constant\footnote{Taking $\theta \simeq 10^{-4}$ is sufficient.}. 
Here, $\widetilde{\slashed{H}}$ designates the {\bf ``good'' components} of the metric in the null frame associated with the coordinates~$\{\xsans^{\alpha}\}$, that is,
\be 
\widetilde{\slashed{H}} \in \bigl\{\,\widetilde{H}^{\Ncal}_{\alpha\beta} \,\big|\, (\alpha,\beta)\neq(0,0)\bigr\}.
\ee
We refer to~\eqref{eq2-14-sept-2025} as the asymptotically Minkowski condition, which coincides with the properties established for the reference metric in~\cite{LR2}. See also Remark~\ref{rmk1-21-sept-2025} below.

\item[$\bullet$] 
{\bf Generalized wave coordinate condition.} 
We also require that the coordinate functions~$\{\xsans^{\alpha}\}$ are approximately wave coordinates with respect to~$\gref$. More precisely, let $\widetilde{W}_{\mathrm{r}}^{\alpha}$ be four sufficiently regular functions defined on~$\Mcal$, satisfying
\begin{equation}\label{eq8-21-sept-2025}
\big|\widetilde{\Kscr}^I \widetilde{W}_{\mathrm{r}}^{\alpha}\big|
\lesssim \epss\,(\tsans+\rsans+1)^{-4-(p-k)+\theta},
\end{equation}
where $\Kscr^I$ denotes an extended admissible operator of type~$(p,k)$ with respect to~$\{\xsans^{\alpha}\}$ and $p \leq N$. Then, if
\begin{equation}\label{eq7-21-sept-2025}
\greft^{\mu\nu}\Gammasans[\gref]_{\mu\nu}^{\alpha} = \widetilde{W}_{\mathrm{r}}^{\alpha},
\end{equation}
the reference metric~$\gref$ is said to satisfy the generalized wave coordinate condition in the chart~$\{\xsans^{\alpha}\}$. We observe that the vacuum solution constructed in~\cite{LR2} satisfies the exact wave coordinate condition, that is, $\widetilde{W}_{\mathrm{r}}^{\alpha} \equiv 0$ if we takes the vacuum solution as the reference.

\item[$\bullet$] 
{\bf Almost Ricci-flat condition.}
We also require that~$\gref$ is approximately Ricci-flat, that is, anapproximate solution to the vacuum Einstein equations. More precisely, there exist sufficiently regular, symmetric tensor fields~$\widetilde{U}_{\mathrm{r}\,\alpha\beta}$ satisfying, for any extended admissible operator~$\Kscr^I$ associated with~$\{\xsans^{\alpha}\}$ of type~$(p,k)$ with~$p\le N$, 
\begin{equation}\label{eq8-13-sept-2025}
\big|\widetilde{\Kscr}^I \widetilde{U}_{\mathrm{r}\,\alpha\beta}\big|
\lesssim \epss^2\,(\tsans+\rsans+1)^{-4-(p-k)+\theta},
\end{equation}
and 
\begin{equation}
\widetilde{R}_{\mathrm{r}\,\alpha\beta} = \widetilde{U}_{\mathrm{r}\,\alpha\beta}.
\end{equation} 
This is called the almost Ricci-flat condition. 
We note that the vacuum solution constructed in~\cite{LR2} satisfies 
$\widetilde{U}_{\mathrm{r}\,\alpha\beta}\equiv0$, so that the condition above holds identically.
\eei

%------------------------------------------------------------

A reference metric satisfying the above three conditions is called a 
\textbf{$(N,\theta,\epss,\ell)$--admissible reference metric}. 
Furthermore, an $(N,\theta,\epss,\ell)$--admissible reference metric is called 
\textbf{light-bending} if it satisfies
\begin{equation}\label{def-61} 
\Hsans_{\mathrm{r}}^{\Ncal00} := 
\gref(\diff \rsans- \diff \tsans,\diff \rsans- \diff \tsans) < 0
\quad \text{in the extended exterior region } 
\{\rsans \geq \tfrac{3}{4}\tsans\}.
\end{equation}
We emphasize that a slightly weaker version of this requirement was imposed in~\cite{PLF-YM-PDE}. On the other hand, this condition, although satisfied by the Schwarzschild metric, may appear restrictive for more general reference metrics. 
However, as we shall show in the remainder of this section, the light-bending property can always be achieved after performing a suitable coordinate transformation.

\begin{remark}[Asymptotically Minkowski condition]
\label{rmk1-21-sept-2025}
Let us consider the inequalities \eqref{eq2-14-sept-2025}. 
In~\cite{PLF-YM-PDE}, we also imposed the following estimates in the exterior region $\{\rsans \geq \tsans - 1\}$:
\be
\aligned
& \big|\delsans^m \Hreft_{\alpha\beta}\big|_{N-m}
\lesssim \epss (\tsans + \rsans + 1)^{-1+\theta} (1 + |\rsans - \tsans|)^{- \kappa - m},
\quad m = 0, 1,
\\
& \big|\delsans^m \widetilde{\slashed{\del}} \Hreft_{\alpha\beta}\big|_{N-m}
\lesssim \epss (\tsans + \rsans + 1)^{-1- \kappa} (1 + |\rsans - \tsans|)^{-m},
\quad m = 0, 1.
\endaligned
\ee
However, these inequalities are \emph{not independent} from~\eqref{eq2-14-sept-2025}. 
Indeed, recalling~\cite[Eqs.~(5.1)--(5.4) and Lemma~5.1]{LR2}, we have
\begin{equation}\label{eq3-01-oct-2025}
\aligned
(1 + |\tsans - \rsans|)^m 
\big|\delsans^m \Hreft_{\alpha\beta}\big|^{\Kscr}_{N - m + 1} 
& \lesssim 
\big|\Hreft_{\alpha\beta}\big|^{\Kscr}_{N + 1}
\lesssim \epss (\tsans + \rsans + 1)^{-1+\theta},
\quad m \leq N + 1,
\\
(1 + \tsans + \rsans)
\big|\widetilde{\slashed{\del}} \Hreft\big|^{\Kscr}_N
& \lesssim 
\big|\Hreft_{\alpha\beta}\big|^{\Kscr}_{N + 1}
\lesssim \epss (\tsans + \rsans + 1)^{-2+\theta}.
\endaligned
\end{equation}
\end{remark}

}

%------------------------------------------------------------------------------------------------------------------------------------- 

\subsection{ The coordinate transformation}
\label{section===61}

{

\paragraph{Proposed strategy.}

In order to ensure the light-bending property~\eqref{def-61} for the reference metric, we now introduce a suitable change of coordinates.  
The purpose of this transformation is to adjust the spacetime foliation so that the level sets of the new radial variable align more closely with the null directions of the metric~$\gref$.  
This construction preserves the asymptotic flatness and the generalized wave coordinate structure, while improving the behavior of the metric near the light cone.  
In particular, the transformation will slightly modify the spatial coordinates in the exterior region, bending them toward the outgoing null directions.  
As a result, the new coordinate system provides a natural framework for establishing the hyperboloidal foliation and for controlling the nonlinear interaction terms that appear in the Einstein--massive field systems.  

To achieve the condition~\eqref{def-61}, we follow the Kauffman-Lindblad strategy (see~\cite{KauffmanLindblad}), which treated the Schwarzschild metric.  
Let $\{\xsans^{\alpha}\}$ be a global coordinate chart defined on~$\Mcal$, and let $g = \gsans_{\alpha\beta}\diff \xsans^{\alpha}\otimes\diff\xsans^{\beta}$ be a general Lorentzian metric satisfying an \emph{asymptotic flatness condition} of order~$\rsans^{-1+\theta}$ with $\theta\in [0,1)$, namely
\be
\big|\gsans_{\alpha\beta}- \etasans_{\alpha\beta}\big|\lesssim \eps_s\, \rsans^{-1+\theta}.
\ee
We then introduce the \emph{modified distance function}
\begin{equation}\label{eq1-08-sept-2025}
\ravec := \rsans + C_m\,\chi(\rsans/\tsans)\, f(\rsans)
=
\begin{cases}
\rsans + C_m\,\theta^{-1}\chi(\rsans/\tsans)\rsans^{\theta}, & \theta>0,\\[0.3em]
\rsans + C_m\,\chi(\rsans/\tsans)\ln(\rsans), & \theta=0,
\end{cases}
\end{equation}
where the function~$f$ satisfies
\be
\frac{\del f}{\del \rsans} = \rsans^{-1+\theta}.
\ee
In the following discussion, we restrict attention to the case $\theta>0$.

Here, $C_m$ is a sufficiently small constant (not necessarily positive) to be determined later, while $\chi$ is a \emph{smooth cutoff function} such that
\begin{equation}\label{eq8-24-sept-2025}
\chi(\tau) =
\begin{cases}
0, & \tau < 1/2,\\
1, & \tau > 5/8.
\end{cases}
\end{equation}
Finally, we introduce the \emph{modified coordinate functions}
\begin{equation}\label{eq4-10-april-2025}
\xavec^0 = \tavec := \tsans, 
\qquad 
\xavec^a := (\ravec/\rsans)\,\xsans^a.
\end{equation}

%---------------------------------------------------------------  

\paragraph{Derivatives of the new coordinates.}

We need to establish several estimates. 
The new coordinates depend smoothly upon their arguments and the Jacobian $\Psisans{} = \left(\frac{\del \xavec{}^{\alpha}}{\del \xsans^{\beta}} \right)$ between the charts $\{\xsans^{\alpha} \}$ and $\{\xavec{}^{\alpha} \}$ is given by
\begin{equation}
\aligned
& \frac{\del \xavec{}^0}{\del \xsans^0} = 1, 
\qquad\frac{\del \xavec{}^0}{\del \xsans^b} = 0,
\\
& \frac{\del \xavec{}^a}{\del \xsans^0} 
= - C_m \chi'( \rsans / \tsans)\frac{ f(\rsans)}{\tsans^2}\xsans^a
=
- C_m\theta^{-1}\chi'(\rsans/\tsans)\frac{\rsans^{\theta}}{\tsans^2}\,\xsans^a,
\\
& \frac{\del \xavec{}^a}{\del \xsans^b} \,=\, \delta_b^a\,
+\, C_m\chi'(\rsans/\tsans)\Big(\frac{f(\rsans)}{\tsans}\Big)\frac{\xsans^a\xsans^b}{\rsans^2} 
+ C_m\chi(\rsans/\tsans)\Big(\rsans^{-1}f(\rsans)\Big)'\frac{\xsans^a\xsans^b}{\rsans}
+ C_m\chi(\rsans/\tsans)\frac{f(\rsans)}{\rsans}\delta_b^a
\\
& \quad\ \ \ 
=
\delta_b^a
+ C_m\theta^{-1}\chi'(\rsans/\tsans)\Big(\frac{\rsans^{\theta}}{\tsans}\Big)\frac{\xsans^a\xsans^b}{\rsans^2}
- C_m\theta^{-1}(1- \theta)\chi(\rsans/\tsans)\rsans^{-1+\theta}\frac{\xsans^a\xsans^b}{\rsans^2} 
+ C_m\theta^{-1}\chi(\rsans/\tsans)\rsans^{-1+\theta}\delta_b^a.
\endaligned
\end{equation}
Since $\chi'(\rsans/\tsans)$ is supported in $\{\tsans/2\leq \rsans\leq 5\tsans/8\}$, we have 
\begin{equation}\label{equa-18mai-1} 
\Psisans{} = \mathrm{I}_4 + 
C_m\theta^{-1}\frac{\rsans^{\theta}}{\rsans} D(\tsans,\rsans),
\end{equation}
where the elements of the matrix $D$ are \emph{homogeneous functions of degree zero}, supported in the extended exterior light-cone region $\{\rsans \geq \tsans/2\}$. It is clear that, provided $|C_m\theta^{-1}|$ is sufficiently small, $\Psisans{}$ is invertible, hence $\{\xavec{}^{\alpha} \}$ forms a coordinate chart on $\Mcal_{[s_0, s_1]}$. 

We also introduce the natural frame associated with the ``old'' coordinates $\{\xsans^{\alpha}\}$,
\be
\delsans_{\alpha} := \frac{\del}{\del \xsans^{\alpha}}, 
\ee
and the natural frame associated with the modified coordinates $\{\xavec^{\alpha}\}$,
\be
\delavec_{\alpha} := \frac{\del}{\del \xavec^{\alpha}}. 
\ee
The matrix $\Phisans{} = \Psisans{}^{-1}$ provides us with the transition relations between $\{\delsans_{\alpha} \}$ and $\{\delavec{}_{\alpha} \}$, specifically 
\be
\delavec{}_0 = \delsans_0, 
\qquad 
\delavec{}_a = \frac{\del}{\del \xavec{}^a} = \frac{\del \xsans^b}{\del \xavec{}^a} \delsans_b.
\ee
With the notation $\Phisans{}_{\alpha}^{\beta} = \big(\Phisans{} \big)_{\beta\alpha}$ and $\Psisans{}_{\alpha}^{\beta} = \big(\Psisans{} \big)_{\beta\alpha}$, we then have 
\begin{equation}
\delavec{}_{\alpha} = \Phisans{}_{\alpha}^{\beta} \, \delsans_{\beta},
\quad 
\delsans_{\alpha} = \Psisans{}_{\alpha}^{\beta} \, \delavec{}_{\beta}. 
\end{equation}

%---------------------------------------------------------------------- 

\paragraph{Estimates for the new coordinates.}

For clarity, we define the vector fields 
\be
\Ssans := \tsans\delsans_t + \xsans^a\delsans_a,
\quad
\Lsans_a := \xsans^a\delsans_t + \tsans\delsans_a,
\quad
\Omegasans_{ab} := \xsans^a\delsans_b - \xsans^b\delsans_a.
\ee
Together with $\delsans_{\alpha}$, these vectors form the family of extended admissible vector fields $\widetilde{\Kscr}$ associated with the coordinates $\{\xsans^{\alpha}\}$. We denote by $\Zsans$ any one in $\widetilde{\Kscr}$. For the coordinates $\{\xavec^{\alpha}\}$ we continue to apply the notation $L_a,\Omega_{ab},\del_{\alpha}$ and $\Kscr$, $Z$ for the corresponding objects. We observe that when \eqref{eq1-08-sept-2025} and \eqref{eq4-10-april-2025} hold
\begin{equation}
\{r\leq t/2\} = \{\rsans\leq \tsans/2\} \ \text{and hence} \ \{r> t/2\} = \{\rsans> \tsans/2\}.
\end{equation}
That is a consequence of the fact that $r=\rsans$ and $t=\tsans$ in the region $\{\rsans\leq \tsans/2\}$.
Furthermore, provided $|C_m\theta^{-1}|$ is sufficiently small, we have 
\begin{equation}
\{r\geq 3t/4\}\subset \{\rsans\geq 5\tsans/8\}.
\end{equation}

We also observe that when \eqref{eq1-08-sept-2025} and \eqref{eq4-10-april-2025} hold with $|C_m\theta^{-1}|$ sufficiently small, 
\begin{equation}
r+t\lesssim  \rsans + \tsans\lesssim r + t.
\end{equation}
This is because the transition matrices \eqref{equa-18mai-1} are uniformly close to the identity.
In the following discussion, we often change $(1+\tsans+\rsans)^\alpha$ by $(1+t+r)^{\alpha}$ without further explanation. We denote by 
\be
\Xi_{\alpha}^{\beta} := \Psisans_{\alpha}^{\beta} - \delta_{\alpha}^{\beta},\quad
\Theta_{\alpha}^{\beta} := \Phisans_{\alpha}^{\beta} - \delta_{\alpha}^{\beta}.
\ee
In addition, it is clear that
\begin{equation}\label{eq3-11-sept-2025}
\aligned
\delavec_{\alpha} & = \Phisans_{\alpha}^{\beta}\delsans_{\beta} = \delsans_{\alpha} + \Theta_{\alpha}^{\beta}\delsans_{\beta},
\quad
\\ 
S & = \widetilde{S} + \tsans\Theta_0^{\alpha}\delsans_{\alpha} 
+ (r/\rsans)\xsans^a\Theta_a^{\beta}\delsans_{\beta} + (r/\rsans-1)\xsans^a\delsans_a,
\\
L_a
& = \Lsans_a + (r/\rsans -1)\xsans^a\Phisans_0^{\beta}\delsans_{\beta} 
+ \tsans\Theta_a^{\beta}\delsans_{\beta}
+ (r/\rsans)\xsans^a\Theta_a^{\beta}\delsans_{\beta},
\\
\Omega_{ab} & = (r/\rsans)\Omegasans_{ab} 
+ (r/\rsans)\big(\xsans^a\Theta_b^{\beta}
- \xsans^b\Theta_a^{\beta}\big)\delsans_{\beta}.
\endaligned
\end{equation}
We rewrite each new admissible vector
$Z\in\Kscr$ (built on $\{\xavec^\alpha\}$) as an old one $\Zsans\in\widetilde{\Kscr}$ (built on $\{\xsans^\alpha\}$)
plus a lower–order correction $\Gamma[Z]^{\beta}\delsans_\beta$. And we thus use the short-hand notation 
\begin{equation}
Z = \Zsans + \Gamma[Z]^{\beta}\delsans_{\beta}, 
\end{equation}
where the coefficients $\Gamma[Z]^{\beta}$ are given by \eqref{eq3-11-sept-2025}. 

We then establish the following estimates, which quantify the effect of the change of variables by estimating the transition matrices $\Psisans$ and $\Phisans=\Psisans^{-1}$ from \eqref{equa-18mai-1}. 

\begin{lemma}[Bounds for transition coefficients]
Assume that the constant $|C_m\theta^{-1}|$ is sufficiently small. Then provided ${\theta \in (0,1)}$ , 
any admissible operator $\Kscr^I$ of type $(p,k)$ one has 
\begin{equation}\label{eq1-11-sept-2025}
\big|\widetilde{\Kscr}^I(\Gamma[\del_{\gamma}]^{\beta})\big|\lesssim_p 
|C_m\theta^{-1}| \, r^{-1-(p-k)+\theta},
\end{equation}
\begin{equation}\label{eq2-11-sept-2025}
\big|\widetilde{\Kscr}^I(\Gamma[Z]^{\beta})\big|\lesssim_p 
|C_m\theta^{-1}| \, r^{-(p-k)+\theta}, \qquad Z = S, L_a,\Omega_{ab}. 
\end{equation}
\end{lemma}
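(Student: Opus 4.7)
The plan is to reduce everything to the structural form of the Jacobian matrices $\Psisans$ and $\Phisans$ computed above, and then to exploit the compatibility of the admissible family $\widetilde{\Kscr}$ with homogeneous functions.

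First, starting from \eqref{equa-18mai-1}, I would rewrite
\[
\Xi_\alpha^{\beta} \;=\; \Psisans_\alpha^{\beta} - \delta_\alpha^{\beta} \;=\; C_m\theta^{-1}\,\rsans^{-1+\theta}\,D_\alpha^{\beta}(\tsans,\xsans),
\]
where each entry $D_\alpha^\beta$ is smooth, supported in $\{\rsans\ge \tsans/2\}$, and \emph{homogeneous of degree zero} ---it is a finite combination of $\chi(\rsans/\tsans)$, $\chi'(\rsans/\tsans)$, $\xsans^a/\rsans$, and polynomials thereof. For the inverse, the smallness of $|C_m\theta^{-1}|$ lets me invoke the Neumann series
\[
\Phisans \;=\; \Psisans^{-1} \;=\; \mathrm{I}_4 + \sum_{j\ge 1}\bigl(-\Xi\bigr)^{j},
\]
which converges in a pointwise sense on the support of $\chi$ and shows that
\[
\Theta_\alpha^{\beta} \;=\; \Phisans_\alpha^{\beta} - \delta_\alpha^{\beta} \;=\; \sum_{j\ge 1}(-1)^{j}(C_m\theta^{-1})^{j}\rsans^{(-1+\theta)j}E_{\alpha,j}^{\beta}(\tsans,\xsans),
\]
with each $E_{\alpha,j}^{\beta}$ still homogeneous of degree zero and supported in $\{\rsans\ge \tsans/2\}$. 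In particular, on this support, $\rsans^{(-1+\theta)j}\le \rsans^{-1+\theta}$ for $j\ge 1$, so the leading term dictates the bound.

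The second step is a differentiation lemma. Since each $\Zsans\in\{\Lsans_a,\Omegasans_{ab},\Ssans\}$ preserves homogeneity degree and $\delsans_\alpha$ lowers it by one, and since $\chi(\rsans/\tsans)$ and its derivatives are homogeneous of degree zero supported in $\{\rsans\ge \tsans/2\}$, for any homogeneous degree-zero function $h$ one has $|\widetilde{\Kscr}^I h|\lesssim_p r^{-(p-k)}$ on $\{\rsans\ge \tsans/2\}$. Combined with $\widetilde{\Kscr}^I(\rsans^{-1+\theta})$ being a linear combination of $\rsans^{-1+\theta-(p-k)+\ell}$ terms (times degree-zero factors, coming from how boosts/rotations/scalings act on $\rsans$) and applying the Leibniz rule, I obtain for each $j\ge 1$
\[
\bigl|\widetilde{\Kscr}^I\bigl(\rsans^{(-1+\theta)j}E_{\alpha,j}^{\beta}\bigr)\bigr|\lesssim_p r^{(-1+\theta)j - (p-k)}.
\]
Summing over $j$, the geometric series is dominated by its first term (given the smallness of $|C_m\theta^{-1}|\rsans^{-1+\theta}$), giving $|\widetilde{\Kscr}^I\Theta_\alpha^\beta|\lesssim_p |C_m\theta^{-1}|\,r^{-1-(p-k)+\theta}$. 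This establishes \eqref{eq1-11-sept-2025} since $\Gamma[\del_\gamma]^\beta=\Theta_\gamma^\beta$.

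The third step is to apply the same machinery to the remaining cases via the explicit formulas \eqref{eq3-11-sept-2025}. Note the structure: for $Z\in\{S,L_a,\Omega_{ab}\}$, each term in $\Gamma[Z]^\beta$ is either $(r/\rsans-1)\xsans^a\delsans_\beta$-type ---and one checks that $r/\rsans-1=C_m\theta^{-1}\chi(\rsans/\tsans)\rsans^{-1+\theta}$ multiplied by $\xsans^a$ (degree one)--- or of the schematic form $(\text{coordinate factor of degree one})\cdot\Theta_\alpha^\beta$ or $(\text{degree zero})\cdot\Theta_\alpha^\beta$. In all cases the extra factor of $t$ or $x$ boosts the $r^{-1+\theta}$ to $r^{\theta}$. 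Applying the differentiation rule above yields $|\widetilde{\Kscr}^I\Gamma[Z]^\beta|\lesssim_p |C_m\theta^{-1}|\,r^{-(p-k)+\theta}$, which is \eqref{eq2-11-sept-2025}. The only genuine bookkeeping obstacle is checking that rotations and boosts do \emph{not} produce lower powers of $r$ when applied to $\chi'(\rsans/\tsans)$ (near $\rsans\simeq \tsans$) or to the quotients $\xsans^a/\rsans$; this is handled by the standard homogeneity calculus since $\chi$ depends only on $\rsans/\tsans$, a scaling-invariant quantity.
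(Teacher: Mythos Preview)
Your proposal is correct and follows essentially the same approach as the paper: both use the homogeneity structure of $\Xi_\alpha^\beta$ from \eqref{equa-18mai-1} together with the Leibniz rule for $\widetilde{\Kscr}^I$ acting on products of $\rsans^{-1+\theta}$ with degree-zero functions, invoke the Neumann series for $\Phisans=\Psisans^{-1}$ to transfer the bound to $\Theta_\alpha^\beta$, and then read off \eqref{eq2-11-sept-2025} from the explicit structure of $\Gamma[Z]^\beta$ in \eqref{eq3-11-sept-2025} as combinations of $\xsans^\gamma\Theta_\alpha^\beta$, $(r/\rsans-1)\xsans^\alpha$, and $(r/\rsans-1)\xsans^\gamma\Theta_\alpha^\beta$. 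The only cosmetic difference is that the paper compresses the Neumann-series step by citing the argument of Lemma~\ref{lem1-11-july-2025}, whereas you write it out term by term.
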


%--------------------------------------------------------------  

\begin{proof} 
\bse
In the region $\{\rsans\geq \tsans/2\}$, for any homogeneous function $u$ of degree zero, the following estimate holds (by homogeneity):
\begin{equation}\label{eq2-08-sept-2025}
\aligned
\big|\widetilde{\Kscr}^I(\rsans^{-1+\theta}u)\big|
\lesssim_p \rsans^{-1 - (p-k) +\theta}\lesssim r^{-1 - (p-k)+\theta}.
\endaligned
\end{equation}
Thanks to \eqref{equa-18mai-1}, this observation implies 
\begin{equation}\label{eq10-13-sept-2025}
\big|\widetilde{\Kscr}^I\Xi_{\alpha}^{\beta}\big|\lesssim_p 
|C_m\theta^{-1}| \, r^{-1-(p-k)+\theta}.
\end{equation}
Since $\Phisans=\Psisans^{-1}= \mathrm{I}_4 + O\!\left(C_m\theta^{-1}\rsans^{-1+\theta}\right)$, the argument of Lemma~\ref{lem1-11-july-2025} leads to
\begin{equation}\label{eq1-12-sept-2025}
\big|\widetilde{\Kscr}^I\Theta_{\alpha}^{\beta}\big|\lesssim_p 
|C_m\theta^{-1}| \, r^{-1-(p-k)+\theta}.
\end{equation}
This leads us to \eqref{eq1-11-sept-2025}. 
\ese
\bse
To deal with \eqref{eq2-11-sept-2025}, we recall the formulas \eqref{eq3-11-sept-2025}. It follows that the coefficients $\Gamma[Z]^{\beta}$ are finite linear combinations of the terms 
\be
\xsans^{\gamma}\Theta_{\beta}^{\alpha},\quad (r/\rsans-1)\xsans^{\alpha},\quad (r/\rsans-1)\xsans^{\gamma}\Theta_{\alpha}^{\beta}. 
\ee
In the above expressions, we have the identity 
\begin{equation}
r/\rsans-1 = C_m\chi(\rsans/\tsans)\rsans^{-1}f(\rsans).
\end{equation}
Hence,  in view of \eqref{eq2-08-sept-2025} and the homogeneity of  $\rsans^{-1}$ and $\chi(\rsans/\tsans)$, 
we arrive at
\be
\big|\widetilde{\Kscr}^I\big(r/\rsans -1\big)\big| \lesssim_p 
|C_m\theta^{-1}| \, r^{-1-(p-k)+\theta}.
\ee
In combination with \eqref{eq1-12-sept-2025}, we conclude that \eqref{eq2-11-sept-2025} holds.
\ese
\end{proof} 

%--------------------------------------------------------------------

We are ready to rewrite derivatives in the new chart as combinations of old-chart derivatives with controllable weights. We observe that the transition matrices differ from the identity by $O\!\left(C_m\theta^{-1}r^{-1+\theta}\right)$, so each geometric vector field expansion contributes a small factor $\sim |C_m\theta^{-1}|\,r^{\theta}$ and at most one extra base derivative.

\begin{lemma}[Transferring the estimates for admissible operators under the coordinate change]
\label{lem1-13-sept-2025}
Consider sufficiently regular scalar fields $u$ defined in $\Mcal_{[s_0,s_1]}$. Assume that $|C_m\theta^{-1}|$ is sufficiently small. Then for any admissible operator $\Kscr^I$ of type $(p,k)$ associated with the coordinates $\{\xavec^{\alpha}\}$, one has 
\begin{equation}\label{eq2-12-sept-2025}
\big|\Kscr^Iu\big|\lesssim_p 
\sum_{\ord(K)=0}^k\big(|C_m\theta^{-1}r^{\theta}|\big)^{\ord(K)}\sum_{\ord(J)\leq p- \ord(K)\atop \rank(J)\leq k- \ord(K)}
\big|\widetilde{\Kscr}^J\delsans^K u\big|.
\end{equation}
Here, $\delsans^K$ denotes any $\ord(K)$--order admissible operator associated with $\{\xsans^{\alpha}\}$ but consisting only of $\delsans_{\alpha}$.
\end{lemma}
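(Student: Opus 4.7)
The plan is to argue by induction on $p=\ord(I)$, using the pointwise decomposition $Z=\Zsans+\Gamma[Z]^{\beta}\delsans_{\beta}$ from \eqref{eq3-11-sept-2025} as the only structural input, together with the scalar bounds \eqref{eq1-11-sept-2025}--\eqref{eq2-11-sept-2025} on iterated old-chart derivatives of the coefficients $\Gamma[Z]^{\beta}$. The base case $p=0$ is vacuous (take $J=K=\emptyset$). For the induction step, write $\Kscr^{I'}=Z\Kscr^{I}$ with $I$ of type $(p,k)$ when $Z=\del_{\alpha}$ and of type $(p,k-1)$ when $Z\in\{L_a,\Omega_{ab},S\}$, and split
\[
\Kscr^{I'}u=\Zsans(\Kscr^{I}u)+\Gamma[Z]^{\beta}\,\delsans_{\beta}(\Kscr^{I}u).
\]
Feed the inductive bound into each factor and distribute the outer derivative by Leibniz.

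For the first term $\Zsans(\Kscr^{I}u)$, the operator $\Zsans\in\widetilde{\Kscr}$ either hits the weight $(|C_{m}\theta^{-1}|r^{\theta})^{\ord(K)}$ or lands on $\widetilde{\Kscr}^{J}\delsans^{K}u$. In the latter case, it simply enlarges $J$ to a $\widetilde{\Kscr}^{J'}$ of order $\ord(J)+1$ and rank $\rank(J)+\mathbf{1}_{\{Z\neq \del_{\alpha}\}}$, consistent with the budget $\ord(J')+\ord(K)\le p+1$ and $\rank(J')+\ord(K)\le \rank(I')$. In the former case, by homogeneity $|\widetilde{\Kscr}^{M}r^{\theta}|\lesssim r^{\theta-(\ord(M)-\rank(M))}$, so the result is dominated (with room to spare) by the same weight $(|C_{m}\theta^{-1}|r^{\theta})^{\ord(K)}$ at the cost of one derivative, which is precisely what the induction allows.

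The second term is where the structural gain $\ord(K)\mapsto \ord(K)+1$ and $\rank(J)\mapsto\rank(J)-1$ must be read off. After Leibniz we obtain a sum of terms of the shape
\[
\widetilde{\Kscr}^{J_{1}}(\Gamma[Z]^{\beta})\cdot\widetilde{\Kscr}^{J_{2}}\delsans_{\beta}\delsans^{K}u\cdot(|C_{m}\theta^{-1}|r^{\theta})^{\ord(K)},
\]
with $\ord(J_{1})+\ord(J_{2})\le \ord(J)$ and $\rank(J_{1})+\rank(J_{2})\le \rank(J)$. Setting $K':=(\beta,K)$, so that $\delsans_{\beta}\delsans^{K}=\delsans^{K'}$ and $\ord(K')=\ord(K)+1$, and applying \eqref{eq2-11-sept-2025} for $Z\in\{L_a,\Omega_{ab},S\}$ gives
\[
\bigl|\widetilde{\Kscr}^{J_{1}}(\Gamma[Z]^{\beta})\bigr|\lesssim |C_{m}\theta^{-1}|\,r^{\theta-(\ord(J_{1})-\rank(J_{1}))}\lesssim |C_{m}\theta^{-1}|\,r^{\theta},
\]
so the coefficient absorbs one extra factor of $|C_{m}\theta^{-1}|r^{\theta}$ and the total weight upgrades to $(|C_{m}\theta^{-1}|r^{\theta})^{\ord(K')}$. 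The rank and order budgets are respected because we have exchanged one rank-one vector $Z$ of the outer operator for an index of $K'$. For $Z=\del_{\alpha}$ the bound \eqref{eq1-11-sept-2025} gives the improved decay $r^{-1+\theta}$, which is strictly better than needed and keeps us inside the inductive hypothesis without advancing $\ord(K)$ (this matches the fact that $\del_{\alpha}$ does not consume any rank).

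\textbf{Main obstacle.} The delicate point is not any single estimate but the joint bookkeeping: one must verify simultaneously that (i) each conversion $Z\rightsquigarrow\Gamma[Z]^{\beta}\delsans_{\beta}$ produces exactly one factor $|C_{m}\theta^{-1}|r^{\theta}$ and one new index in $K$, (ii) the leftover derivatives $\widetilde{\Kscr}^{J_{1}}$ that land on $\Gamma[Z]^{\beta}$ never need to be charged against the weight (their surplus decay $r^{-(\ord(J_{1})-\rank(J_{1}))}$ is discarded), and (iii) when $Z=\del_{\alpha}$ the correction $\Theta_{\alpha}^{\beta}\delsans_{\beta}$ contributes to $\delsans^{K}$ but, thanks to the sharper bound \eqref{eq1-11-sept-2025}, without increasing the exponent of $|C_{m}\theta^{-1}|r^{\theta}$ beyond what the previous step already paid. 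Packaging these three observations into a clean induction is the real content of the proof; the analytic estimates themselves are immediate from \eqref{eq1-11-sept-2025}--\eqref{eq2-11-sept-2025} and the homogeneity of the correction coefficients on $\{\rsans\ge \tsans/2\}$.
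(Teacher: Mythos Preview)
Your induction places the new vector $Z$ \emph{outermost}, writing $\Kscr^{I'}=Z\Kscr^{I}$ and then splitting $Z=\Zsans+\Gamma[Z]^{\beta}\delsans_{\beta}$. The difficulty is that the inductive hypothesis \eqref{eq2-12-sept-2025} is a pointwise \emph{bound}, not an identity: you cannot ``distribute the outer derivative by Leibniz'' across it. When you write that $\Zsans$ ``either hits the weight $(|C_{m}\theta^{-1}|r^{\theta})^{\ord(K)}$ or lands on $\widetilde{\Kscr}^{J}\delsans^{K}u$'', you are implicitly treating the weight as the literal coefficient in a decomposition of $\Kscr^{I}u$. The true coefficients are products of factors $\widetilde{\Kscr}^{J'}(\Gamma[Z']^{\gamma})$ accumulated from earlier steps; to make your approach rigorous you would have to strengthen the induction to track this explicit structure and verify that it is stable under one more $\widetilde{\Kscr}$-derivative. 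This can be done, but it is exactly the bookkeeping you flag as the ``main obstacle'', and it is not carried out. Your treatment of the second term also seems confused: in $\Gamma[Z]^{\beta}\delsans_{\beta}(\Kscr^{I}u)$ nothing differentiates $\Gamma[Z]^{\beta}$, so the factor $\widetilde{\Kscr}^{J_{1}}(\Gamma[Z]^{\beta})$ you write down has no source with $Z$ placed outermost.

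The paper avoids all of this by placing $Z$ \emph{innermost}: write $\Kscr^{I'}=\Kscr^{I}Z$, so that
\[
\Kscr^{I'}u=\Kscr^{I}(\Zsans u)+\sum_{I_{1}\odot I_{2}=I}\Kscr^{I_{1}}(\Gamma[Z]^{\beta})\,\Kscr^{I_{2}}(\delsans_{\beta}u).
\]
Now the inductive hypothesis applies as a black box to $\Kscr^{I}(\Zsans u)$ and $\Kscr^{I_{2}}(\delsans_{\beta}u)$ (these are $\Kscr$-operators of order $\le p$ acting on \emph{new scalar functions}), and $\Kscr^{I_{1}}(\Gamma[Z]^{\beta})$ is bounded by combining the inductive hypothesis with \eqref{eq1-11-sept-2025}--\eqref{eq2-11-sept-2025}. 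No differentiation of the bound is needed; the commutation $[\delsans^{K},\Zsans]$ stays within the $\delsans$-family with constant coefficients, which handles the reshuffling inside $\widetilde{\Kscr}^{J}\delsans^{K}(\Zsans u)$. This is the substantive simplification you are missing.
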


\begin{proof}
\bse
We proceed by induction on $\ord(I)$. When $\ord(I)=1$, \eqref{eq2-12-sept-2025} follows directly from \eqref{eq3-11-sept-2025}. Let $\Kscr^{I'} = \Kscr^I Z$ with $Z=\del_{\gamma},S, L_a$ or $\Omega_{ab}$, and denote by $(p,k)$ the type of $\Kscr^I$ and by $(p',k')$ the type of $\Kscr^{I'}$.

\vskip.15cm

\noindent$\bullet$ When $Z=\del_{\gamma}$, $(p',k')=(p+1,k)$, we find
\be
\aligned
\Kscr^{I'}u & = \Kscr^I(\del_{\gamma}u) 
= \Kscr^I(\delsans_{\gamma}u + \Gamma[\del_{\gamma}]^{\beta}\delsans_{\beta}u)
\\
& = \Kscr^I(\delsans_{\gamma}u) 
+ \sum_{I_1\odot I_2=I}\Kscr^{I_1}(\Gamma[\del_{\gamma}]^{\beta})\Kscr^{I_2}(\delsans_{\beta}u)
=: T_1+T_2.
\endaligned
\ee
For $T_1$, apply the induction hypothesis. For $T_2$, use \eqref{eq1-11-sept-2025}:
\be
|T_2|\lesssim_p |C_m\theta^{-1}|\, r^{-1+\theta}\!\!\sum_{\ord(J)\le p\atop \rank(J)\le k}
\big|\widetilde{\Kscr}^J\delsans u\big|,
\ee
which fits \eqref{eq2-12-sept-2025} (absorbing $r^{-1}\le 1$ into the $\ord(K)=1$ slice).

\vskip.15cm

\noindent$\bullet$ When $Z\in\{S,L_a,\Omega_{ab}\}$, $(p',k')=(p+1,k+1)$, we have
\be
\aligned
\Kscr^{I'}u & = \Kscr^I(\Zsans u + \Gamma[Z]^{\beta}\delsans_{\beta}u)
= \Kscr^I(\Zsans u) 
+ \sum_{I_1\odot I_2=I}\Kscr^{I_1}(\Gamma[Z]^{\beta})\Kscr^{I_2}(\delsans_{\beta}u)
=: T_1+T_2.
\endaligned
\ee
The term $T_1$ is controlled by the induction hypothesis (recall $[\delsans^K,\Zsans]$ is a linear combination of $\delsans^{K'}$ of the same order with constant coefficients). For $T_2$, by \eqref{eq2-11-sept-2025} and the induction hypothesis,
\be
\aligned
\big|\Kscr^{I_1}(\Gamma[Z]^{\beta})\Kscr^{I_2}(\delsans_{\beta}u)\big|
& \lesssim_p |C_m\theta^{-1}|\, r^{\theta}\!\!\sum_{\ord(K)=0}^{k}\!\big(|C_m\theta^{-1}|\, r^{\theta}\big)^{\ord(K)}
\!\!\sum_{\ord(J)\le p- \ord(K)\atop \rank(J)\le k- \ord(K)}
\big|\widetilde{\Kscr}^J\delsans^{K}\delsans_{\beta} u\big|
\\
& \lesssim_p \sum_{\ord(K')=1}^{k'}\big(|C_m\theta^{-1}|\, r^{\theta}\big)^{\ord(K')}
\!\!\sum_{\ord(J)\le p'- \ord(K')\atop \rank(J)\le k'- \ord(K')}
\big|\widetilde{\Kscr}^J\delsans^{K'} u\big|,
\endaligned
\ee
where we wrote $\delsans^{K'}=\delsans^{K}\delsans_{\beta}$ and used $p'=p+1$ and $k'=k+1$.
\ese
\end{proof}

%-------------------------------------------------------------------- 

We thus have the following conclusion on the transition matrices, whose proof is immediate by recalling \eqref{eq10-13-sept-2025}, \eqref{eq1-12-sept-2025}, and applying Lemma~\ref{lem1-13-sept-2025}.

\begin{corollary}[Bounds for transition matrices under the coordinate change]
\label{cor1-14-sept-2025}
Assume that $|C_m\theta^{-1}|$ is sufficiently small. Then one has 
\begin{equation}
|\Kscr^I\Xi_{\alpha}^{\beta}|+|\Kscr^I\Theta_{\alpha}^{\beta}|\lesssim_p
|C_m\theta^{-1}|(t+r+1)^{-1-(p-k)+\theta}, 
\end{equation}
where $\Kscr^I$ is any admissible operator associated with $\{\xavec^{\alpha}\}$ of type $(p,k)$.
\end{corollary}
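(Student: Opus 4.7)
The strategy is to combine the change-of-chart formula from Lemma~\ref{lem1-13-sept-2025} with the base estimates \eqref{eq10-13-sept-2025} and \eqref{eq1-12-sept-2025} already obtained in the old chart $\{\xsans^{\alpha}\}$. I would apply Lemma~\ref{lem1-13-sept-2025} to the scalar components $u = \Xi_{\alpha}^{\beta}$ (and, identically, to $u = \Theta_{\alpha}^{\beta}$), which yields the expansion
\be
\big|\Kscr^I\Xi_{\alpha}^{\beta}\big|\lesssim_p
\sum_{\ord(K)=0}^{k}\big(|C_m\theta^{-1}|\,r^{\theta}\big)^{\ord(K)}\hspace{-0.3cm}\sum_{\ord(J)\leq p-\ord(K)\atop \rank(J)\leq k-\ord(K)}\hspace{-0.3cm}
\big|\widetilde{\Kscr}^J\delsans^K\Xi_{\alpha}^{\beta}\big|.
\ee

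The key observation is that $\widetilde{\Kscr}^J\delsans^K$ is itself an admissible operator in the old chart: its order equals $\ord(J)+\ord(K)\leq p$, while its rank equals $\rank(J)\leq k-\ord(K)$, since each factor $\delsans$ contributes $0$ to the rank. Applying \eqref{eq10-13-sept-2025} in the worst case where $\rank(J)=k-\ord(K)$ and $\ord(J)=p-\ord(K)$ then yields
\be
\big|\widetilde{\Kscr}^J\delsans^K\Xi_{\alpha}^{\beta}\big|
\lesssim_p |C_m\theta^{-1}|\,r^{-1-(p-k)-\ord(K)+\theta},
\ee
and multiplying by the weight $(|C_m\theta^{-1}|\,r^{\theta})^{\ord(K)}$ produces
$|C_m\theta^{-1}|^{\ord(K)+1}r^{-1-(p-k)+(\ord(K)+1)\theta-\ord(K)}$.
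Since $\theta\in(0,1)$, the elementary inequality $(\ord(K)+1)\theta-\ord(K)\leq \theta$ holds, so each summand is controlled by $|C_m\theta^{-1}|^{\ord(K)+1}r^{-1-(p-k)+\theta}$. Summing over the finite range $\ord(K)\in\{0,\ldots,k\}$ and using the smallness of $|C_m\theta^{-1}|$ absorbs the extra powers into a single factor $|C_m\theta^{-1}|$.

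It remains to trade $r$ for $(1+t+r)$ on the right-hand side. Since $\chi(\rsans/\tsans)$ is supported in $\{\rsans\geq \tsans/2\}$, both $\Xi_{\alpha}^{\beta}$ and $\Theta_{\alpha}^{\beta}$ vanish where $\rsans< \tsans/2$; on the complementary support region one has $r\gtrsim t+r+1$, in view of the equivalence $r+t\simeq \rsans+\tsans$ recorded just before Lemma~\ref{lem1-13-sept-2025} together with $\tsans\geq 1$. Since $1+(p-k)-\theta>0$, this gives $r^{-1-(p-k)+\theta}\lesssim (1+t+r)^{-1-(p-k)+\theta}$ on the support. The argument for $\Theta_{\alpha}^{\beta}$ is identical upon substituting \eqref{eq1-12-sept-2025} for \eqref{eq10-13-sept-2025}. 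The principal obstacle in executing this plan is the bookkeeping of order versus rank under the composition $\widetilde{\Kscr}^J\delsans^K$: each partial derivative contributes $+1$ to the order but $0$ to the rank, so a direct application of \eqref{eq10-13-sept-2025} apparently loses a power $r^{-\ord(K)}$; this loss is precisely compensated by the weight $r^{\theta\,\ord(K)}$ coming from Lemma~\ref{lem1-13-sept-2025}, and the assumption $\theta<1$ is exactly what guarantees the balance.
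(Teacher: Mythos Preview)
There is a genuine gap in your identification of the ``worst case''. The bound \eqref{eq10-13-sept-2025} gives $|\widetilde{\Kscr}^{J}\delsans^{K}\Xi|\lesssim |C_m\theta^{-1}|\,r^{-1-(\ord(J)+\ord(K)-\rank(J))+\theta}$, which is \emph{largest} when $\ord(J)-\rank(J)$ is \emph{smallest}. Under the constraints $\ord(J)\le p-\ord(K)$ and $\rank(J)\le k-\ord(K)$, the minimum of $\ord(J)-\rank(J)$ is $0$ (attained e.g.\ for the trivial $J$), not $p-k$. Your chosen ``worst case'' $\ord(J)=p-\ord(K)$, $\rank(J)=k-\ord(K)$ is in fact the \emph{best} case among the $J$'s in the sum. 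Concretely, for $\ord(K)=0$ and $J$ trivial, the right-hand side of Lemma~\ref{lem1-13-sept-2025} contains the bare term $|\Xi|\lesssim |C_m\theta^{-1}|\,r^{-1+\theta}$, which is strictly weaker than the target $r^{-1-(p-k)+\theta}$ whenever $p>k$. A sanity check: for $\Kscr^I=\del_\gamma$ (type $(1,0)$) the lemma only yields $|\del_\gamma\Xi|\lesssim|\Xi|+\sum_\beta|\delsans_\beta\Xi|\lesssim |C_m\theta^{-1}|r^{-1+\theta}$, whereas the corollary demands $r^{-2+\theta}$.

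The difficulty is that Lemma~\ref{lem1-13-sept-2025}, being an inequality, overcounts: its right-hand side includes low-order terms (like the undifferentiated $|u|$) that do \emph{not} actually arise in the expansion of $\Kscr^I u$. To recover the $(p-k)$ gain one must bypass this slack. One clean route is a direct induction on $\ord(I)$: write $Z=\Zsans+\Gamma[Z]^\beta\delsans_\beta$, expand $\Kscr^{I'}=\Kscr^IZ$, and bound $\Kscr^{I_1}(\Gamma[Z]^\beta)\cdot\Kscr^{I_2}(\delsans_\beta\Xi)$ using the inductive hypothesis applied \emph{simultaneously} to $\Xi$, $\Theta$, and the coefficients $\Gamma[Z]^\beta$ (all of which satisfy old-chart bounds of the same form via \eqref{eq1-11-sept-2025}--\eqref{eq2-11-sept-2025}). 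The point is that every genuine term in the expansion carries a product of $\Gamma$-factors whose total decay, combined with that of the old-chart derivative hitting $\Xi$, always adds up to at least $r^{-1-(p-k)+\theta}$; the lemma as stated discards this bookkeeping. The paper's proof is recorded only as ``immediate'', and likely has a direct induction of this sort in mind rather than a literal invocation of the loose inequality.
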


\begin{corollary}[Derivatives in the new chart of the reference metric components]
\label{cor2-21-sept-2025}
Assume that $|C_m\theta^{-1}|$ is sufficiently small and the reference metric $\gref$ is $(N,\theta,\epss,\ell)$-admissible. Then for any extended admissible operator $\Kscr^I$ of type $(p,k)$ associated with the new coordinates $\{\xavec^{\alpha}\}$, with $p\leq N+2$, one has 
\begin{equation}\label{eq5-21-sept-2025}
\big|\Kscr^I\Hreft_{\alpha\beta}\big| + \big|\Kscr^I\Hreft^{\alpha\beta}\big|\lesssim_p \epss\big(1 + |C_m\theta^{-1}| \, r^{k\theta}(1+|r-t|)^{-k}\big)
(1+|r-t|)^{-(p-k)}(t+r+1)^{-1+\theta}.
\end{equation}
\end{corollary}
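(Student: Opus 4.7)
\textbf{Proof plan for Corollary~\ref{cor2-21-sept-2025}.} The strategy is to combine the transition expansion underlying Lemma~\ref{lem1-13-sept-2025} with the asymptotically Minkowski estimate \eqref{eq2-14-sept-2025} and the classical Klainerman commutator inequality. More precisely, for any $Z\in\{S,L_a,\Omega_{ab}\}$ we have $Z=\Zsans+\Gamma[Z]^{\beta}\delsans_{\beta}$ with $|\Gamma[Z]^{\beta}|\lesssim|C_m\theta^{-1}|r^{\theta}$, while each new-chart partial derivative $\del_{\gamma}=\delsans_{\gamma}+\Gamma[\del_{\gamma}]^{\beta}\delsans_{\beta}$ with $|\Gamma[\del_{\gamma}]^{\beta}|\lesssim|C_m\theta^{-1}|r^{-1+\theta}$. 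I would follow the inductive argument in the proof of Lemma~\ref{lem1-13-sept-2025}, but keeping track of \emph{which} of the $k$ Lorentz-type generators have been converted to pure partial derivatives: the resulting expansion is a sum, over subsets of the Lorentz slots of $\Kscr^I$, of terms of the form $(|C_m\theta^{-1}|r^{\theta})^{j}\cdot(\widetilde{\Kscr}^{J_j}\delsans^{K_j})\Hreft_{\alpha\beta}$, where $\widetilde{\Kscr}^{J_j}$ is of type $(p-j,\,k-j)$ and $\delsans^{K_j}$ has order $j$, plus lower-order remainders arising from the $\Gamma[\del_{\gamma}]$ corrections (which only contribute $|C_m\theta^{-1}|r^{-1+\theta}$ factors and are trivially absorbed).

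For each such term with conversion index $j$, the operator $\widetilde{\Kscr}^{J_j}\delsans^{K_j}$ has total order $p$, and its number of pure-partial-derivative factors is $(\ord(J_j)-\rank(J_j))+\ord(K_j)=(p-k)+j$. Invoking Klainerman's commutator calculus together with \eqref{eq3-01-oct-2025} (which itself is a direct consequence of \eqref{eq2-14-sept-2025} valid up to order $N+2$), each pure partial extracts one factor of $(1+|r-t|)^{-1}$, so that
\[
\bigl|\widetilde{\Kscr}^{J_j}\delsans^{K_j}\Hreft_{\alpha\beta}\bigr|
\lesssim \epss\,(1+|r-t|)^{-(p-k+j)}(1+t+r)^{-1+\theta}.
\]
Summing the $\binom{k}{j}$ such conversion terms (one for each choice of $j$ Lorentz slots to convert, among the $k$ available) produces
\[
|\Kscr^I\Hreft_{\alpha\beta}|\lesssim_p\epss(1+|r-t|)^{-(p-k)}(1+t+r)^{-1+\theta}\bigl(1+|C_m\theta^{-1}|r^{\theta}(1+|r-t|)^{-1}\bigr)^{k}.
\]

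Applying the elementary bound $(1+x)^{k}\leq 2^{k}(1+x^{k})$ valid for $x\geq 0$, and using that $|C_m\theta^{-1}|^{k}\leq |C_m\theta^{-1}|$ since $|C_m\theta^{-1}|$ is taken sufficiently small, I obtain the claimed estimate for $|\Kscr^I\Hreft_{\alpha\beta}|$. For the inverse-metric component $\Hreft^{\alpha\beta}=\gref^{\alpha\beta}-\eta^{\alpha\beta}$, I would use the Neumann series
\[
\Hreft^{\alpha\beta}=\sum_{\ell\geq1}(-1)^{\ell}\eta^{\alpha\mu_1}\Hreft_{\mu_1\nu_1}\eta^{\nu_1\mu_2}\Hreft_{\mu_2\nu_2}\cdots\Hreft_{\mu_\ell\nu_\ell}\eta^{\nu_\ell\beta},
\]
exactly analogous to \eqref{eq1-11-july-2025}, which converges uniformly because $|\Hreft|\lesssim\epss$; then differentiating term-by-term and combining the just-derived bound with the Moser-type product estimate \eqref{eq1-17-july-2025}, the linear $\ell=1$ contribution reproduces the same bound while higher-order terms carry extra powers of $\epss$ and are absorbed.

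The main obstacle is the combinatorial bookkeeping: one must verify that the inductive expansion genuinely produces the binomial weights $\binom{k}{j}$ and that the conversion of each Lorentz slot independently contributes both a factor $|C_m\theta^{-1}|r^{\theta}$ and one extra unit of $(1+|r-t|)^{-1}$ decay, rather than interfering. A secondary subtlety is that the commutator/Klainerman step invokes the asymptotically Minkowski control at orders $\leq p\leq N+2$, which is precisely the range where \eqref{eq2-14-sept-2025} and its corollary \eqref{eq3-01-oct-2025} apply; this is the reason for the bound $p\leq N+2$ in the statement. Once these combinatorial and range issues are settled, the remaining steps are mechanical applications of already-established estimates.
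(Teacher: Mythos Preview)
Your proof is correct and follows the same strategy as the paper: convert new-chart operators to old-chart ones via the transition expansion, extract $(1+|r-t|)^{-1}$ factors from pure partials by the Klainerman commutator inequality \eqref{eq3-01-oct-2025}, invoke the base bound \eqref{eq2-14-sept-2025}, and treat $\Hreft^{\alpha\beta}$ by the Neumann series as in Lemma~\ref{lem1-11-july-2025}.

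The only presentational difference is that the paper first commutes all partials in $\Kscr^I$ to the right, writing it as a finite linear combination of $\Kscr^J\del^{p-k}$ with $\ord(J)=\rank(J)\leq k$ (purely Lorentz), and only then applies \eqref{eq2-12-sept-2025} to $\Kscr^J$ acting on $\del^m u$. This reordering guarantees automatically that every resulting old-chart operator carries at least $m=p-k$ pure partials from the $\del^{p-k}\to\delsans^{p-k}$ block, so the Klainerman step delivers the baseline $(1+|r-t|)^{-(p-k)}$ factor without further argument; the Lorentz block $\Kscr^J$ then contributes the additional $(|C_m\theta^{-1}|r^\theta(1+|r-t|)^{-1})^{\ord(K)}$ corrections for $0\leq\ord(K)\leq\ord(J)\leq k$. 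This sidesteps exactly the combinatorial bookkeeping you flag as the main obstacle. Your direct-expansion route reaches the same endpoint but requires the careful verification you describe that each converted Lorentz slot independently supplies both an $r^\theta$ factor and one extra unit of $(1+|r-t|)^{-1}$ decay.
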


%------------------------------------------------------ 

\begin{proof} 
\bse
For any $(p,k)$ type extended admissible operator $\widetilde{\Kscr}^I$ associated with $\{\xsans^{\alpha}\}$, we have 
\begin{equation}
\big|\widetilde{\Kscr}^I \Hreft_{\alpha\beta}\big|\lesssim \epss (1+|r-t|)^{-(p-k)}(1+t+r)^{-1+\theta}, 
\end{equation}
as essentially already pointed out in Remark~\ref{rmk1-21-sept-2025}. More precisely, recalling \cite[Eqs. (5.1)--(5.4) and Lemma~5.1]{LR2}, we find 
\begin{equation}
\big|\widetilde{\Kscr}^J\delsans^j u\big|\lesssim (1+|t-r|)^{-j}\sum_{\ord(J')\leq \ord(J)+j}\big|\widetilde{\Kscr}^{J'}u\big|.
\end{equation}
This means, in short, that one can trade a unit of rank for one derivative order at the cost of a factor $(1+|r-t|)^{-1}$.
\ese
\bse
Now let us consider the estimate \eqref{eq5-21-sept-2025}. By a commutation relation of the family $\Kscr$, we can write $\Kscr^I$ as a finite linear combination of 
\be
\Kscr^{J}\del^{p-k},\quad \ord(J) = \rank(J)\leq k.
\ee
For example, since $[\del_t,S]=\del_t$, we have $\del_t S = S\del_t + \del_t$. This is easily checked by induction. Consequently, we apply \eqref{eq2-12-sept-2025} and obtain
\be
\aligned
\big|\Kscr^{J}\del^m u\big|& \lesssim_p  
\sum_{\ord(K)=0}^{\ord(J)}\big(|C_m\theta^{-1}| \, r^{\theta}\big)^{\ord(K)}\sum_{\ord(J')\leq{ \ord(J)- \ord(K)}}\big|\widetilde{\Kscr}^{J'}\delsans^K\delsans^m u\big|
\\
& \lesssim \sum_{\ord(K)=0}^{\ord(J)}\big(|C_m\theta^{-1}| \, r^{\theta}\big)^{\ord(K)}(1+|r-t|)^{- (\ord(K)+m)}\sum_{\ord(J'')\leq \ord(J)+m}\big|\widetilde{\Kscr}^{J''}u\big|
\\
& \lesssim  \Big(1 + |C_m\theta^{-1}| \, r^{\ord(J)\theta}(1+|r-t|)^{- \ord(J)}\Big)(1+|r-t|)^{-m}
\sum_{\ord(J'')\leq \ord(J)+m}\hspace{-0.8cm}\big|\widetilde{\Kscr}^{J''}u\big|.
\endaligned
\ee
We fix $m=p-k$ and observe that $\ord(J)\leq k$. Then from \eqref{eq2-14-sept-2025} we obtain the estimate on $\Hreft_{\alpha\beta}$. For $\Hreft^{\alpha\beta}$, we use the standard series control of the inverse metric (as in the proof of Lemma~\ref{lem1-11-july-2025}) to transfer the same bounds to the contravariant components. 
\ese
\end{proof}

}

%----------------------------------------------------------------------------------------------

\subsection{ Reference metric in the new coordinate chart}

{ 

\paragraph{Light-bending condition.}

In the above notation, we can compute the null component of the metric perturbation
\be
\Havec{}^{\Ncal 00} := (\diff \xavec{}^0 - \diff \ravec{}, \diff \xavec{}^0- \diff \ravec{})_g.
\ee 
We now establish that given any $(N,\theta,\epss,\ell)$--admissible reference metric, there always exists a new coordinate system \eqref{eq1-08-sept-2025} in which the sign condition $\Havec{}^{\Ncal 00}<0$ is satisfied.

\begin{proposition}[The light-bending condition holds in the new coordinates]
\label{lem1-12-sept-2025}
Let $\gsans_{\alpha\beta} = \etasans_{\alpha\beta} + \Hsans_{\alpha\beta}$ be a metric defined in $\Mcal$ and suppose that  
\begin{equation}\label{eq1-07-sept-2025}
| \Hsans | :=\max_{\alpha, \beta} | \Hsans_{\alpha\beta}| < \epss \rsans^{-1+\theta} 
\quad  \text{ in } \Mcal \cap \{\rsans \geq (5/8) \tsans \} \quad\text{ for some } \theta> 0. 
\end{equation}
Then there exist two universal constants $K_0,K_1$ and a constant $\eps_0>0$ such that for any $0<\epss\leq \epsm\leq \eps_0$ and any $C_L\ge 0$ satisfying 
\be
0\leq C_L\leq \frac{1}{K_1\epsm}-K_0,
\ee
the following property holds. In the new coordinates $\{\tavec, \xavec^a = (\ravec{}/\rsans)\xsans^a \}$ defined by
\bel{equa-coord-transform}
\tavec{} = \tsans, \qquad \ravec{} = \rsans - K\theta^{-1}\epsm \chi(\rsans / \tsans) \rsans^{\theta},
\end{equation}
with $K = K_0+C_L$, the null metric component satisfies 
\begin{equation}\label{eq3-07-sept-2025}
\Havec^{\Ncal 00} := (\diff \tavec{} - \diff \ravec{}, \diff \tavec{} - \diff \ravec{})_g 
< -  C_L\epsm \rsans^{-1+\theta}
\qquad \text{ in } \Mcal \cap\{ \rsans \geq (5/8) \tsans \}.
\end{equation}
\end{proposition}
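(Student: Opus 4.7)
The plan is to compute $\diff\tavec-\diff\ravec$ directly from \eqref{equa-coord-transform} and then to bilinearly expand the pairing $g(\diff\tavec-\diff\ravec,\diff\tavec-\diff\ravec)$ using the decomposition $g=\etasans+\Hsans$. In the region $\{\rsans\geq(5/8)\tsans\}$, the cutoff $\chi(\rsans/\tsans)$ is identically equal to $1$ by the convention \eqref{eq8-24-sept-2025}, so its derivative drops out. Differentiating $\ravec=\rsans-K\theta^{-1}\epsm\,\rsans^{\theta}$ with $K=K_0+C_L$ yields the simple expression
\[
\diff\tavec-\diff\ravec \;=\; (\diff\tsans-\diff\rsans) \;+\; K\epsm\,\rsans^{\theta-1}\,\diff\rsans .
\]

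Next I would expand bilinearly. Using the flat-space identities $\etasans(\diff\tsans-\diff\rsans,\diff\tsans-\diff\rsans)=0$, $\etasans(\diff\tsans-\diff\rsans,\diff\rsans)=-1$ and $\etasans(\diff\rsans,\diff\rsans)=1$, together with the definition $\Hsans^{\Ncal00}=g(\diff\tsans-\diff\rsans,\diff\tsans-\diff\rsans)$ and the pointwise bound \eqref{eq1-07-sept-2025}, the expansion takes the schematic form
\[
\Havec^{\Ncal00} \;=\; \Hsans^{\Ncal00} \;-\; 2K\epsm\,\rsans^{\theta-1} \;+\; K^2\epsm^2\,\rsans^{2\theta-2} \;+\; \Ecal,
\]
where $\Ecal$ collects the cross-contractions involving at least one factor of $\Hsans$ and satisfies $|\Ecal|\le C(\epss+K\epsm)\,\epsm\,\rsans^{2\theta-2}$.

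The main point is now transparent: the explicit contribution $-2K\epsm\,\rsans^{\theta-1}$ is strictly negative and of linear order in $\epsm$, whereas $|\Hsans^{\Ncal00}|\le C_0\epss\,\rsans^{-1+\theta}\le C_0\epsm\,\rsans^{-1+\theta}$ since $\epss\le\epsm$. Using a uniform bound $\rsans^{\theta-1}\le C_\ast$ valid on $\{\rsans\geq(5/8)\tsans\}\cap\Mcal$ (so that the quadratic corrections carry an extra small factor $\rsans^{\theta-1}$), the desired inequality $\Havec^{\Ncal00}<-C_L\epsm\,\rsans^{-1+\theta}$ reduces, after factoring out $\epsm\,\rsans^{\theta-1}$, to a scalar algebraic condition of the form
\[
2K - C_0 - C_1\,K(1+K)\,\epsm \;\geq\; C_L .
\]
Fixing $K_0$ large enough that $2K_0-C_0\geq 1$, the condition becomes $K + (K_0-C_0) \geq C_1 K(1+K)\epsm$. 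The upper bound $C_L\leq (K_1\epsm)^{-1}-K_0$ is equivalent to $K\epsm\leq K_1^{-1}$, which forces $C_1 K(1+K)\epsm\leq C_1 K_1^{-1}(1+K)$; choosing $K_1$ large and $\eps_0$ small enough, this stays below any prescribed fraction of $K$, and the strict inequality \eqref{eq3-07-sept-2025} follows.

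The hard part will be the bookkeeping of the cross-contractions that make up $\Ecal$ and verifying that the quadratic correction $K^2\epsm^2\rsans^{2\theta-2}$, which a priori could grow with $C_L$, is genuinely controlled by the upper bound on $C_L$. This is precisely where the two-sided restriction $0\le C_L\le(K_1\epsm)^{-1}-K_0$ enters: smallness of $K\epsm=(K_0+C_L)\epsm$ is exactly what allows the negative linear-in-$\epsm$ term to dominate both the Minkowski-perturbation contribution $\Hsans^{\Ncal00}$ and the self-interaction of the coordinate correction, uniformly over $\{\rsans\geq(5/8)\tsans\}$.
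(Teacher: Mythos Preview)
Your proposal is correct and follows essentially the same route as the paper's own proof: compute $\diff\tavec-\diff\ravec$ in the region where $\chi'\equiv 0$, expand $g(\cdot,\cdot)$ bilinearly to isolate the dominant negative term $-2K\epsm\rsans^{\theta-1}$, and then check that this dominates $\Hsans^{\Ncal00}$ and the quadratic-in-$K\epsm$ remainders under the constraint $K\epsm\lesssim K_1^{-1}$. The only cosmetic difference is that the paper lumps the pure-Minkowski quadratic $K^2\epsm^2\rsans^{2\theta-2}$ into the remainder $R[\Hsans]$ together with the $\Hsans$-cross terms, whereas you keep it separate; and the paper fixes $K_0$ to be precisely the constant in the bound $|\Hsans^{\Ncal00}|\leq K_0\epss\rsans^{\theta-1}$ (your $C_0$), which makes the final algebra cleaner---with that choice the scalar condition reduces directly to $\epsm(C_0+K_1K)\leq 2$, avoiding the slightly loose step where you invoke ``$K_0$ large enough that $2K_0-C_0\geq 1$.''
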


%------------------------------------------------------

\begin{proof} 
\bse
We compute 
\be
\diff(t-r) = \diff\tsans - \diff\rsans 
+ K\theta^{-1}\epsm\chi'(\rsans/\tsans)\rsans^{-1+\theta}\Big(\frac{\rsans}{\tsans}\diff\rsans - (\rsans/\tsans)^2\diff\tsans\Big)
+K\epsm\chi(\rsans/\tsans)\rsans^{-1+\theta}\diff\rsans.
\ee
Since $\chi'(\rsans/\tsans) \equiv 0$ in the region $\Mcal \cap \{\rsans \geq (5/8) \tsans \}$, the above expression reduces to
\be
\diff(t-r) = \diff\tsans - \diff\rsans + K\epsm\rsans^{-1+\theta}\diff\rsans.
\ee
Then we deduce that 
\be
\Havec^{\Ncal00} = \Hsans^{\Ncal00} 
+ 2K\epsm\rsans^{-1+\theta}\big(\diff\tsans- \diff\rsans,\diff\rsans\big)_g
+ \big(K\epsm\rsans^{-1+\theta}\big)^2(\diff\rsans,\diff\rsans)_g.
\ee
\ese
This leads us to
\begin{equation}\label{eq2-07-sept-2025}
\aligned
\Havec^{\Ncal00} & = \Hsans^{\Ncal00} - 2K\epsm \rsans^{-1+\theta} + R[\Hsans],
\\
R[\Hsans] :& = K\epsm\rsans^{-1+\theta}\Big(\sum_a\frac{\xsans^a}{\rsans}\Hsans^{0a}
- \sum_{a,b}\frac{\xsans^a\xsans^b}{\rsans^2}\Hsans^{ab} \Big)
+ K^2\epsm^2\rsans^{-2+2\theta}\sum_{a,b}\frac{\xsans^a\xsans^b}{\rsans^2}\gsans^{ab}.
\endaligned
\end{equation}
Here we used $(\diff\tsans- \diff\rsans,\diff\rsans)_{\etasans}=-1$ and $(\diff\rsans,\diff\rsans)_{\etasans}=1$, so the Minkowski contribution yields the term $-2K\epsm\rsans^{-1+\theta}$.)

In view of \eqref{eq1-07-sept-2025}, there exists a universal constant $K_0>0$ such that 
\begin{equation}
\rsans^{1- \theta} \big|\Hsans^{\Ncal00}\big|\leq K_0\epss,\quad \text{ in }  \{\rsans\geq 5\tsans/8\}.
\end{equation}
On the other hand, we have 
\be
\rsans^{1- \theta} |R[\Hsans]|\leq \frac{1}{2}K\epsm^2(C_0 + K_1K)
\ee
with positive universal constants $C_0,K_1$. Thus \eqref{eq2-07-sept-2025} implies 
\be
\rsans^{1- \theta} \Havec^{\Ncal00}\leq K_0\epsm + \frac{1}{2}K\epsm^2(C_0 + K_1K) - 2K\epsm.
\ee
\bse
Provided that 
\begin{equation}\label{eq4-07-sept-2025}
\epsm\leq \frac{1}{C_0},\quad K\leq \frac{1}{K_1\epsm}, \quad\epsm\leq \frac{1}{K_0K_1}, 
\end{equation}
we deduce that 
\be
\epsm|C_0+K_1K|\leq 2,\quad K_0\leq \frac{1}{K_1\epsm}.
\ee
In this case, we take $0\leq C_L\leq \frac{1}{K_1\epsm} - K_0$ and we pick $K = K_0 + C_L$. 
In this case, \eqref{eq3-07-sept-2025} holds and we guarantee \eqref{eq4-07-sept-2025} by choosing 
\be
\eps_0 = \min\big\{C_0^{-1},(K_0K_1)^{-1}\big\}. \qedhere
\ee
\ese
\end{proof}

%------------------------------------------------------

\paragraph{Next aim.}

The coordinate transformation \eqref{equa-coord-transform} produces a light-bending coordinate chart.  The price to pay is that it does not preserve the wave coordinate condition. Nevertheless, the condition is only mildly violated: as we will prove in Proposition~\ref{lem1-26-sept-2025} below, the error term can be quantitatively controlled.

\paragraph{Asymptotic Minkowski condition.}

First, we translate the asymptotically Minkowski condition for an admissible reference metric into the new coordinate chart $\{\xavec^{\alpha}\}$. This change of coordinates was constructed in Proposition~\ref{lem1-12-sept-2025} applied to $g=\gref$, i.e.,
\begin{equation}
t=\tsans,\quad x^{a}=\big(r/\rsans\big)\,\xsans^{a},\quad
r=\rsans- \Kref\,\epsm\,\chi(\rsans/\tsans)\,\rsans^{\theta},
\end{equation}
with $\Kref:=(K_0+C_L)\theta^{-1}$ and we assume $\Kref\epsm$ is sufficiently small.

For convenience, we define 
\be
\Hreff_{\alpha\beta} := \gref_{\alpha\beta} - \eta_{\alpha\beta}, 
\ee
which is the deviation of $\gref$ from the Minkowski metric associated with the coordinates $\{\xavec^{\alpha}\}$. It is clear that
\begin{equation}\label{eq5-14-sept-2025}
\Hreff_{\alpha\beta} = \Href_{\alpha\beta} + \Phisans_{\alpha}^{\mu}\Phisans_{\beta}^{\nu}\etasans_{\mu\nu} - \eta_{\alpha\beta}
\quad \text{with}\quad \Href_{\alpha\beta} := \Phisans_{\alpha}^{\mu}\Phisans_{\beta}^{\nu}\Hreft_{\mu\nu}.
\end{equation} 
Here, $\Href_{\alpha\beta}$ are the components of the tensor $\Hreft_{\alpha\beta}\diff \xsans^{\alpha}\otimes \diff \xsans^{\beta}$ in the coordinates $\{\xavec^{\alpha}\}$.  It is important to point out that, generally speaking, $\Hreff_{\alpha\beta}\neq \Href_{\alpha\beta}$, but these two objects are close to each other. In fact we have the following estimates.

\begin{lemma}[Comparing metric perturbations]
\label{lem1-14-sept-2025}
Assume that $\Kref\epsm$ is sufficiently small. Then one has 
\begin{equation}
\big|\Href_{\alpha\beta} - \Hreff_{\alpha\beta}\big|^{\Kscr}_{p,k}\lesssim_p \Kref\epsm\,(t+r+1)^{-1-(p-k)+\theta}.
\end{equation}
\end{lemma}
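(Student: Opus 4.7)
The plan is to reduce the difference $\Href_{\alpha\beta}-\Hreff_{\alpha\beta}$ to a purely geometric/algebraic expression involving only the transition matrix $\Phisans=I+\Theta$, and then invoke Corollary~\ref{cor1-14-sept-2025}. Starting from \eqref{eq5-14-sept-2025} we simply read off
\begin{equation}
\Href_{\alpha\beta}-\Hreff_{\alpha\beta}
\;=\;\eta_{\alpha\beta}-\Phisans_{\alpha}^{\mu}\Phisans_{\beta}^{\nu}\etasans_{\mu\nu}.
\end{equation}
The key observation is that, numerically, the components $(\eta_{\alpha\beta})$ and $(\etasans_{\alpha\beta})$ are the same constant matrix $\mathrm{diag}(-1,1,1,1)$, so we may replace $\etasans_{\mu\nu}$ by $\eta_{\mu\nu}$ on the right-hand side. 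Writing $\Phisans_{\alpha}^{\mu}=\delta_{\alpha}^{\mu}+\Theta_{\alpha}^{\mu}$ and expanding yields
\begin{equation}
\Href_{\alpha\beta}-\Hreff_{\alpha\beta}
\;=\;-\,\Theta_{\alpha}^{\mu}\eta_{\mu\beta}
\;-\;\Theta_{\beta}^{\nu}\eta_{\alpha\nu}
\;-\;\Theta_{\alpha}^{\mu}\Theta_{\beta}^{\nu}\eta_{\mu\nu}.
\end{equation}

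Next, I would apply Corollary~\ref{cor1-14-sept-2025} directly, identifying the small parameter via $|C_m\theta^{-1}|=\Kref\epsm$ (recall from Proposition~\ref{lem1-12-sept-2025} that $C_m=K\epsm$ with $K=K_0+C_L$, so $C_m\theta^{-1}=(K_0+C_L)\theta^{-1}\epsm=\Kref\epsm$). Thus, for every extended admissible operator $\Kscr^I$ of type $(p,k)$ associated with $\{\xavec^\alpha\}$,
\begin{equation}
\big|\Kscr^I\Theta_{\alpha}^{\mu}\big|
\;\lesssim_p\;\Kref\epsm\,(t+r+1)^{-1-(p-k)+\theta}.
\end{equation}
Since $\eta_{\mu\beta}$ is a constant tensor, the two linear contributions are immediately controlled by the right-hand side of the lemma.

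For the quadratic term $\Theta_{\alpha}^{\mu}\Theta_{\beta}^{\nu}\eta_{\mu\nu}$, a Leibniz expansion together with the admissible partition gives
\begin{equation}
\big|\Kscr^I\!\bigl(\Theta_{\alpha}^{\mu}\Theta_{\beta}^{\nu}\bigr)\big|
\;\lesssim_p\hspace{-0.4cm}\sum_{p_1+p_2\leq p\atop k_1+k_2\leq k}\hspace{-0.3cm}
(\Kref\epsm)^{2}\,(t+r+1)^{-2-(p-k)+2\theta},
\end{equation}
and using that $\Kref\epsm$ is small and $(t+r+1)^{-1+\theta}\lesssim 1$ (for $\theta\in(0,1)$), this quadratic contribution is absorbed into the same bound $\Kref\epsm(t+r+1)^{-1-(p-k)+\theta}$. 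Combining the three contributions yields the desired estimate.

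The proof is essentially a direct algebraic reduction followed by an application of Corollary~\ref{cor1-14-sept-2025}; there is no real obstacle. The only point that deserves care is recognizing that the numerical identity $\eta_{\mu\nu}=\etasans_{\mu\nu}$ (as constant matrices) allows one to cancel the constant parts and express $\Href-\Hreff$ purely in terms of the deviation $\Theta$ of the transition matrix from the identity; once this is done, the decay rate follows mechanically from the decay of $\Theta$ already established in Corollary~\ref{cor1-14-sept-2025}.
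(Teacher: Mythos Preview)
Your proof is correct and follows essentially the same approach as the paper: both expand $\Phisans=\mathrm{I}+\Theta$ in the identity $\Href_{\alpha\beta}-\Hreff_{\alpha\beta}=\eta_{\alpha\beta}-\Phisans_{\alpha}^{\mu}\Phisans_{\beta}^{\nu}\etasans_{\mu\nu}$ to obtain linear and quadratic terms in $\Theta$, then bound these using the decay of $\Theta$ and absorb the quadratic contribution. Your citation of Corollary~\ref{cor1-14-sept-2025} (bounds on $\Kscr^I\Theta$ in the new chart) is in fact slightly more precise than the paper's reference to~\eqref{eq1-12-sept-2025} (old-chart operators), since the norm $|\cdot|^{\Kscr}_{p,k}$ in the statement is taken with respect to the new-chart family $\Kscr$.
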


\begin{proof} By a direct calculation, we find 
\be
\aligned
\Phisans_{\alpha}^{\mu}\Phisans_{\beta}^{\nu}\etasans_{\mu\nu} - \eta_{\alpha\beta}
& = (\delta_{\alpha}^{\mu}+\Theta_{\alpha}^{\mu})(\delta_{\beta}^{\nu}+\Theta_{\beta}^{\nu})\etasans_{\mu\nu}- \eta_{\alpha\beta}
\\
& = \Theta_{\alpha}^{\mu}\,\etasans_{\mu\beta} 
+ \Theta_{\beta}^{\nu}\,\etasans_{\alpha\nu}
+ \Theta_{\alpha}^{\mu}\Theta_{\beta}^{\nu}\,\etasans_{\mu\nu}.
\endaligned
\ee
Applying \eqref{eq1-12-sept-2025} ---and absorbing the quadratic term, which is $O\big((\Kref\epsm)^2(t+r+1)^{-2-(p-k)+2\theta}\big)$, into the stated bound for $\Kref\epsm$ sufficiently small)--- yields the desired estimate.
\end{proof}

%---------------------------------------------------------------------------------------------------------

\begin{lemma}[Asymptotic control of the reference perturbation in the new chart]\label{lem1-21-sept-2025}
Let $\gref$ be an $(N,\theta,\epss,\ell)$--admissible reference. Then, provided $\Kref\epsm$ is sufficiently small, one has 
\begin{equation}\label{eq6-14-sept-2025}
|\Href_{\alpha\beta}|^{\Kscr}_{p,k}\lesssim \epss\big(1+r^{k\theta}(1+|r-t|)^{-k}\big)(1+|r-t|)^{k-p}(t+r+1)^{-1+\theta},\quad p\leq N+2,
\end{equation}
\begin{equation}\label{eq6-21-sept-2025}
\aligned
|\Hreff_{\alpha\beta}|^{\Kscr}_{p,k}& \lesssim  \Kref\epsm(1+t+r)^{-1-(p-k)+\theta}
\\
& \quad +\epss\big(1+r^{k\theta}(1+|r-t|)^{-k}\big)(1+|r-t|)^{k-p}(t+r+1)^{-1+\theta},\quad p\leq N+2,
\endaligned
\end{equation}
\begin{equation}\label{eq7-14-sept-2025}
|\del{\slashed{H}_{\mathrm{r}}}_{\alpha\beta}|\lesssim \epss(t+r+1)^{-1}.
\end{equation}
\end{lemma}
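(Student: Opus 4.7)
The three inequalities will follow by combining Corollary~\ref{cor2-21-sept-2025} (which controls the extended-admissible derivatives of $\Hreft_{\alpha\beta}$ in the new chart $\{\xavec^{\alpha}\}$) with the transition-matrix estimates of Corollary~\ref{cor1-14-sept-2025}, reorganized through the identity \eqref{eq5-14-sept-2025}. I would treat the three bounds in the order they are stated.

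For \eqref{eq6-14-sept-2025}, I start from $\Href_{\alpha\beta}=\Phisans_{\alpha}^{\mu}\Phisans_{\beta}^{\nu}\Hreft_{\mu\nu}$ and apply Leibniz to $\Kscr^I$ acting on this triple product. Every admissible partition $I=I_1\odot I_2\odot I_3$ yields a term of the form $\Kscr^{I_1}(\Phisans_{\alpha}^{\mu})\,\Kscr^{I_2}(\Phisans_{\beta}^{\nu})\,\Kscr^{I_3}(\Hreft_{\mu\nu})$. The partition $I_1=I_2=\emptyset$, $I_3=I$ produces the principal contribution, which is bounded directly by Corollary~\ref{cor2-21-sept-2025} by the right-hand side of \eqref{eq6-14-sept-2025}. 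For the remaining partitions at least one of $I_1,I_2$ is nontrivial, so the corresponding $\Kscr^{I_j}\Phisans=\Kscr^{I_j}\Theta+\mathrm{constant}$ contributes by Corollary~\ref{cor1-14-sept-2025} a factor $\Kref\epsm(t+r+1)^{-1-(\ord(I_j)-\rank(I_j))+\theta}$, while the remaining $\Kscr^{I_3}\Hreft_{\mu\nu}$ gains an extra $(t+r+1)^{-1+\theta}$. Since the ``orders minus ranks'' add over the partition, all of these corrections are absorbed into the right-hand side of \eqref{eq6-14-sept-2025} provided $\Kref\epsm$ is sufficiently small.

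For \eqref{eq6-21-sept-2025}, I apply the decomposition \eqref{eq5-14-sept-2025}, writing $\Hreff_{\alpha\beta}=\Href_{\alpha\beta}+\bigl(\Theta_{\alpha}^{\mu}\etasans_{\mu\beta}+\Theta_{\beta}^{\nu}\etasans_{\alpha\nu}+\Theta_{\alpha}^{\mu}\Theta_{\beta}^{\nu}\etasans_{\mu\nu}\bigr)$. The first term is controlled by \eqref{eq6-14-sept-2025}. The second, purely algebraic in $\Theta$ (as $\etasans_{\mu\nu}$ is constant), is estimated by applying Leibniz and Corollary~\ref{cor1-14-sept-2025} to each $\Theta$-factor; this yields the $\Kref\epsm(1+t+r)^{-1-(p-k)+\theta}$ term, which is exactly the first summand in the right-hand side of \eqref{eq6-21-sept-2025}. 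This is the high-order analogue of Lemma~\ref{lem1-14-sept-2025} and is routine under the smallness of $\Kref\epsm$.

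For \eqref{eq7-14-sept-2025}, I express the null frame $\{\diff t-\diff r,\diff x^a\}$ of $\{\xavec^{\alpha}\}$ in terms of the old null frame $\{\diff\tsans-\diff\rsans,\diff\xsans^a\}$. Since $t-r=\tsans-\rsans+\Kref\epsm\chi(\rsans/\tsans)\rsans^{\theta}$ and the spatial coordinates are related by a conformal factor $r/\rsans-1=O(\Kref\epsm r^{-1+\theta})$, the transition from the old ``good'' components $\widetilde{\slashed H}_{\mathrm r}$ to the new ones $\slashed H_{\mathrm r}$ and its derivatives introduce only corrections of size $\Kref\epsm r^{-1+\theta}$. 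Applying the stated assumption $|\delsans\widetilde{\slashed H}_{\mathrm r}|\lesssim\epss(\tsans+\rsans+1)^{-1}$ and absorbing the discrepancy $\Href-\Hreff$ by Lemma~\ref{lem1-14-sept-2025} then yields \eqref{eq7-14-sept-2025}.

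The main technical obstacle is the bookkeeping of the mixed weight $r^{k\theta}(1+|r-t|)^{-k}$, which arises from trading units of rank for units of order through the commutation relations $[\widetilde{\Kscr},\delsans]$ and which is then distorted by the nontrivial Jacobian of the light-bending change of variables. The key point is that the cutoff $\chi'$ localizes the worst distortions in the compact annulus $\{\tsans/2\le\rsans\le 5\tsans/8\}$, where $(1+|r-t|)$ is comparable to $r$ and the factor $r^{k\theta}(1+|r-t|)^{-k}$ is bounded; outside this annulus $\Theta$ is smooth with exact homogeneous behaviour, so \eqref{eq2-12-sept-2025} applies cleanly. Careful exploitation of this dichotomy, together with the smallness of $\Kref\epsm$, will be what closes the argument.
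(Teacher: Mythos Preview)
Your approach is correct and essentially matches the paper's: \eqref{eq6-14-sept-2025} and \eqref{eq6-21-sept-2025} follow from the Leibniz expansion of \eqref{eq5-14-sept-2025} combined with Corollaries~\ref{cor1-14-sept-2025} and~\ref{cor2-21-sept-2025} (the latter already encodes the mixed weight $r^{k\theta}(1+|r-t|)^{-k}$, so the dichotomy in your final paragraph is unnecessary), and \eqref{eq7-14-sept-2025} from writing ${\slashed{H}_{\mathrm r}}_{\alpha\beta}={\slashed{\Hsans}_{\mathrm r}}_{\alpha\beta}+A_{\alpha\beta}^{\gamma\delta}\Hsans_{\gamma\delta}$ (with $A$ homogeneous of degree $\le -1+\theta$), differentiating, and bounding the leading term by the second line of \eqref{eq2-14-sept-2025}. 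One minor point: the paper carries out \eqref{eq7-14-sept-2025} directly for $\Href$, so your step absorbing $\Href-\Hreff$ is superfluous, though harmless.
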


\begin{proof}
\bse
The estimate \eqref{eq6-14-sept-2025} follows directly from \eqref{eq5-14-sept-2025} together with Corollaries~\ref{cor1-14-sept-2025} and \ref{cor2-21-sept-2025}. The estimate \eqref{eq6-21-sept-2025} is based on Lemma~\ref{lem1-14-sept-2025} and \eqref{eq6-14-sept-2025}.
To deal with \eqref{eq7-14-sept-2025}, we observe that 
\be
\aligned
\Href(\del_t,\del^{\Ncal}_a) & =  \frac{x^a}{r}\Href_{00} + \Href_{0a}
= \frac{x^a}{r}\Phisans_a^{\alpha}\Phisans_0^{\beta}\Hreft_{\alpha\beta} + \Phisans_0^{\alpha}\Phisans_a^{\beta}\Hreft_{\alpha\beta},
\\
\Href(\del^{\Ncal}_a,\del^{\Ncal}_b) & = \frac{x^ax^b}{r^2}\Href_{00} + \frac{2x^a}{r}\Href_{a0} 
+ \Href_{ab}
\\
& =\frac{x^ax^b}{r^2}\Phisans_0^{\alpha}\Phisans_0^{\beta}\Hreft_{\alpha\beta} 
+ \frac{2x^a}{r}\Phisans_a^{\alpha}\Phisans_0^{\beta}\Hreft_{\alpha\beta}
+ \Phisans_a^{\alpha}\Phisans_b^{\beta}\Hreft_{\alpha\beta}.
\endaligned
\ee
We also have 
\begin{equation}\label{eq1-27-sept-2025}
\del^{\Ncal}_a 
= \delsans^{\Ncal}_a 
+ \big(\Theta_a^{\beta} + (\xsans^a/\rsans)\Theta_0^{\beta}\big)\delsans_{\beta},
\quad
\del_t = \delsans_t + \Theta_0^{\beta}\delsans_{\beta}.
\end{equation}
Thus it follows that 
\be
\aligned
\Href(\del_t,\del^{\Ncal}_a) & = \Href(\delsans_t,\delsans^{\Ncal}_a) 
+ \Theta_0^{\beta}\Href(\delsans_{\beta},\delsans^{\Ncal}_a)
+ \big(\Theta_a^{\gamma} + (\xsans^a/\rsans)\Theta_0^{\gamma}\big)\Href(\delsans_t,\delsans_\gamma)
\\
& \quad +\Theta_0^{\beta}\big(\Theta_a^{\gamma} +(\xsans^a/\rsans)\Theta_0^{\gamma}\big)\Href(\delsans_{\beta},\delsans_{\gamma}),
\\
\Href(\del^{\Ncal}_a,\del^{\Ncal}_b) & =  \Href(\delsans^{\Ncal}_a,\delsans^{\Ncal}_b)
+ \big(\Theta_b^{\gamma} + (\xsans^b/\rsans)\Theta_0^{\gamma}\big)
\Href(\delsans^{\Ncal}_a,\delsans_{\gamma})
+ \big(\Theta_a^{\beta} + (\xsans^a/\rsans)\Theta_0^{\beta}\big)
\Href(\delsans_{\beta},\delsans^{\Ncal}_c)
\\
& \quad +\big(\Theta_a^{\beta} + (\xsans^a/\rsans)\Theta_0^{\beta}\big)
\big(\Theta_b^{\gamma} + (\xsans^b/\rsans)\Theta_0^{\gamma}\big)
\Href(\delsans_{\beta},\delsans_{\gamma}).
\endaligned
\ee
In other words, we have
\be
{\slashed{H}_{\mathrm{r}}}_{\alpha\beta} = {\slashed{\Hsans}_{\mathrm{r}}}_{\alpha\beta} + A_{\alpha\beta}^{\gamma\delta}\Hsans_{\gamma\delta}, 
\ee 
where $A_{\alpha\beta}^{\gamma\delta}$ are homogeneous functions of degree $(-1+\theta)$ and/or $(-2+2\theta)$ in the cone $\{\rsans\geq 5\tsans/8\}$. 
\ese
\bse
On the other hand, we recall \eqref{eq1-27-sept-2025} and obtain 
\begin{equation}\label{eq2-27-sept-2025}
\del_{\delta}{\slashed{H}_{\mathrm{r}}}_{\alpha\beta} 
= \delsans_{\delta}{\slashed{\Hsans}_{\mathrm{r}}}_{\alpha\beta} 
+ B_{\delta\alpha\beta}^{\lambda\mu\nu}\delsans_{\lambda}\Hsans_{\mu\nu}
+ C_{\delta\alpha\beta}^{\mu\nu}\Hsans_{\mu\nu}. 
\end{equation}
Here, $B_{\delta\alpha\beta}^{\lambda\mu\nu}$ are finite linear combinations of homogeneous functions of degree not exceeding $(-1+\theta)$ with constant coefficients, and $C_{\delta\alpha\beta}^{\mu\nu}$ are finite linear combinations of homogeneous functions of degree not exceeding $(-2+\theta)$ with constant coefficients. Then, in \eqref{eq2-27-sept-2025}, we bound the leading term by the second estimate in \eqref{eq2-14-sept-2025}, while the remaining terms are easily controlled via \eqref{eq2-14-sept-2025} combined with the homogeneity of $B_{\delta\alpha\beta}^{\lambda\mu\nu}$ and $C_{\delta\alpha\beta}^{\mu\nu}$.
\ese
\end{proof}

%----------------------------------------------------------- 

In practical applications, we may also use the following weaker version.

\begin{corollary}[Coarse bounds under small $\theta$]\label{cor1-23-sept-2025}
Let $\gref$ be an $(N,\theta,\epss,\ell)$--admissible reference and assume, in addition, that $(N+3)\theta\leq 1$. Then, provided $\Kref\epsm$ is sufficiently small, one has 
\begin{subequations}\label{eqs2-23-sept-2025}
\begin{equation}\label{eq4-23-sept-2025}
|\Hreff_{\alpha\beta}\big|^{\Kscr}_p + \big|\Hreff^{\alpha\beta}\big|^{\Kscr}_p
\lesssim_N \Kref\epsm(t+r+1)^{-1+(p+1)\theta}
\lesssim \Kref\epsm,\quad p\leq N+2.
\end{equation}
\begin{equation}\label{eq5-23-sept-2025}
\aligned
\big|\del_{\mu}\gref^{\alpha\beta}\big|^{\Kscr}_p 
 + \big|\del_{\nu}\gref_{\alpha\beta}\big|^{\Kscr}_p
& \lesssim_N  \Kref \epsm(1+|r-t|)^{-1}(t+r+1)^{-1+(p+1)\theta}
\\
& \lesssim  \Kref\epsm(1+|r-t|)^{-1},\quad p\leq N+1,
\endaligned
\end{equation}
\begin{equation}
|\del_{\mu}\del_{\nu}\gref_{\alpha\beta}|^{\Kscr}_p
+ |\del_{\mu}\del_{\nu}\gref^{\alpha\beta}|^{\Kscr}_p
\lesssim 
\Kref \epsm(1+|r-t|)^{-2}(t+r+1)^{-1+(p+1)\theta},\quad p\leq N,
\end{equation}
\begin{equation}
|\del_{\mu}\dels_{\nu}\gref_{\alpha\beta}|^{\Kscr}_p
+ |\del_{\mu}\dels_{\nu}\gref^{\alpha\beta}|^{\Kscr}_p
\lesssim \Kref \epsm(1+|r-t|)^{-1}(t+r+1)^{-2+(p+1)\theta},\quad p\leq N. 
\end{equation}
Here, $\dels_{\nu}$ denotes the ``good'' derivatives, that is, 
\be
\aligned
\dels_{\nu} & = \delu_a,\quad a=1,2,3 \quad & \text{ in the domain } \Mcal^{\Hcal}_{[s_0,+\infty)}, 
\\
\dels_{\nu} & = \del^{\Ncal}_a,\quad a=1,2,3 \quad & \text{ in the domain } \Mcal^{\ME}_{[s_0,+\infty)}. 
\endaligned
\ee
\end{subequations}
\end{corollary}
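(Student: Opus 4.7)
The plan is to reduce each of the four inequalities in \eqref{eqs2-23-sept-2025} to the sharp high--order estimate \eqref{eq6-21-sept-2025} of Lemma~\ref{lem1-21-sept-2025}, by raising the order of the admissible operator to absorb the partial or good derivatives, and then to use the smallness assumption $(N+3)\theta\le 1$ to convert the combinatorial weight $r^{k\theta}(1+|r-t|)^{-k}$ into the simple factor $(t+r+1)^{(p+1)\theta}\le(t+r+1)$.

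For \eqref{eq4-23-sept-2025}, I would apply \eqref{eq6-21-sept-2025} at the worst index $k=p$, so the exterior weight $(1+|r-t|)^{k-p}$ disappears, and bound the remaining factor by $r^{p\theta}(1+|r-t|)^{-p}\le (t+r+1)^{p\theta}$. Using $\epss\le \epsm\le \Kref\epsm$ and $(p+1)\theta\le 1$, this collapses to the stated $\Kref\epsm(t+r+1)^{-1+(p+1)\theta}$. The bound for the contravariant components $\Hreff^{\alpha\beta}$ follows from the Neumann series identity of Lemma~\ref{lem1-11-july-2025}, the point being that each additional factor of $\Hreff$ contributes a benign weight of at most $(t+r+1)^{\theta}$; the hypothesis $(N+3)\theta\le 1$ precisely ensures that all products close within the admissible exponent range.

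For the derivative bounds \eqref{eq5-23-sept-2025} and the second--order case, I would observe that if $\Kscr^I$ is admissible of type $(p,k)$, then $\Kscr^I\del_\mu$ (resp.~$\Kscr^I\del_\mu\del_\nu$) is admissible of type $(p+1,k)$ (resp.~$(p+2,k)$), since partial derivatives are rank zero. Applying \eqref{eq6-21-sept-2025} with $k=p$ produces precisely the extra factor $(1+|r-t|)^{-1}$ (resp.~$(1+|r-t|)^{-2}$); the contribution $\Kref\epsm(1+t+r)^{-2+\theta}$ from the first summand of \eqref{eq6-21-sept-2025} is absorbed by using the trivial comparison $(1+|r-t|)\le (1+t+r)$. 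For the contravariant side, one applies the same operator to the Neumann series and invokes Step~1 inductively. The order constraints $p\le N+1$ and $p\le N$ are exactly what is needed to stay within the range $p\le N+2$ of Lemma~\ref{lem1-21-sept-2025}.

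For \eqref{eqs2-23-sept-2025}(d), the good derivatives $\dels_\nu$ are, in each of the two regions, finite linear combinations of the admissible fields $L_a$, $\Omega_{ab}$, or of $\del_t+\del_r$, with homogeneous coefficients of degree $-1$; this gives the pointwise trade $|\dels u|^{\Kscr}_{p,k}\lesssim (1+t+r)^{-1}|u|^{\Kscr}_{p+1,k+1}$. Composing with one $\del_\mu$ one gets the extra factor $(1+|r-t|)^{-1}(1+t+r)^{-1}$ and the argument closes as before. The main obstacle I anticipate is the careful bookkeeping for the inverse metric via the Neumann expansion: each derivative that falls on a factor of $\Hreff$ in the expansion must be traded against one of the weights $(1+|r-t|)^{-1}$ or $(1+t+r)^{-1}$, and one must check that the total exponent $(p+1)\theta$ never exceeds what $(N+3)\theta\le 1$ allows; this is really what forces the explicit upper bounds $p\le N+2,\,N+1,\,N$ in the four statements.
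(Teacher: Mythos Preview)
Your proposal is correct and follows essentially the same approach as the paper, which simply cites \eqref{eq6-21-sept-2025} together with $(N+3)\theta\le 1$ for the first two estimates and invokes the Lindblad--Rodnianski trade-offs \eqref{eq3-01-oct-2025} for the last two; you have spelled out that same mechanism in detail, including the Neumann-series argument for $\Hreff^{\alpha\beta}$ that the paper takes for granted. One small terminological slip: $\del_t+\del_r$ is not itself in $\Kscr$, but the identity $\del_t+\del_r=(t+r)^{-1}\bigl(S+(x^a/r)L_a\bigr)$ together with the angular decomposition via $r^{-1}\Omega_{ab}$ is exactly what yields your claimed trade $|\dels u|^{\Kscr}_{p,k}\lesssim (1+t+r)^{-1}|u|^{\Kscr}_{p+1,k+1}$.
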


\begin{proof} We apply \eqref{eq6-21-sept-2025} together with the condition $(N+3)\theta\leq 1$. For the last two estimates we apply \eqref{eq3-01-oct-2025}.
\end{proof}

%------------------------------------------------------------------

\paragraph{Generalized wave gauge condition.} 

Next, we need to estimate the contracted Christoffel symbols ${\Gamma_{\mathrm{r}}}^{\lambda} := \gref^{\alpha\beta}{\Gamma_{\mathrm{r}}}_{\alpha\beta}^{\lambda}$. 

\begin{proposition}\label{lem1-26-sept-2025}
Let $\gref$ be an $(N,\theta,\epss,\ell)$--admissible reference with $(N+3)\theta\leq 1$. Then, provided $\Kref\epsm$ is sufficiently small, one has 
\begin{equation}\label{eq3-22-sept-2025}
\big|{\Gamma_{\mathrm{r}}}^{\lambda}\big|^{\Kscr}_{N+1}\lesssim \Kref\epsm(t+r+1)^{-2+\theta},
\end{equation}
\begin{equation}\label{eq4-22-sept-2025}
\big|\del{\Gamma_{\mathrm{r}}}^{\lambda}\big|^{\Kscr}_N\lesssim \Kref\epsm(1+|r-t|)^{-1}(t+r+1)^{-2+\theta}.
\end{equation}
\end{proposition}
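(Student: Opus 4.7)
The starting point is the identity
\[
{\Gamma_{\mathrm{r}}}^{\lambda} \;=\; \widetilde{W}_{\mathrm{r}}^{\mu}\,\Psisans_{\mu}^{\lambda}\;-\;\greft^{\alpha\beta}\,\delsans_{\alpha}\delsans_{\beta} x^{\lambda},
\]
which follows by applying the coordinate-invariant Laplacian $\Box_{\gref}$ to the scalar function $x^{\lambda}$: in the new chart $\Box_{\gref}x^{\lambda}=-{\Gamma_{\mathrm{r}}}^{\lambda}$, while in the old chart $\Box_{\gref}x^{\lambda}=\greft^{\alpha\beta}\delsans_{\alpha}\delsans_{\beta} x^{\lambda} - \widetilde{{\Gamma_{\mathrm{r}}}}^{\mu}\,\delsans_{\mu}x^{\lambda}$, and by the generalized wave gauge condition \eqref{eq7-21-sept-2025} we have $\widetilde{{\Gamma_{\mathrm{r}}}}^{\mu}=\widetilde{W}_{\mathrm{r}}^{\mu}$. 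The first (gauge) piece is estimated immediately by \eqref{eq8-21-sept-2025} combined with $|\Psisans_{\mu}^{\lambda}|^{\widetilde{\Kscr}}\lesssim 1$, yielding $O\bigl(\epss(1+t+r)^{-4+\theta}\bigr)$, which is far smaller than what \eqref{eq3-22-sept-2025} demands and can therefore be absorbed.

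The main work lies in the second piece. Since $x^{0}=\xsans^{0}$ we have $\delsans_{\alpha}\delsans_{\beta}x^{0}=0$, so only the spatial components matter, and these read
\[
\delsans_{\alpha}\delsans_{\beta}x^{a} \;=\; -\Kref\,\epsm\,\delsans_{\alpha}\delsans_{\beta}\bigl(\chi(\rsans/\tsans)\,\rsans^{\theta-1}\xsans^{a}\bigr).
\]
The function $\chi(\rsans/\tsans)\,\rsans^{\theta-1}\xsans^{a}$ is, on its support $\{\rsans\geq \tsans/2\}$, the product of a degree-$0$ homogeneous cutoff and a degree-$\theta$ homogeneous function, so it is (piecewise) homogeneous of degree $\theta$. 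Each translation $\delsans_{\mu}$ decreases the homogeneity degree by one, while the Lorentz fields $\Lsans_{a},\Omegasans_{ab},\Ssans$ preserve it. Consequently, for any extended admissible operator $\widetilde{\Kscr}^{J}$ of type $(p,k)$,
\[
\bigl|\widetilde{\Kscr}^{J}\bigl(\delsans_{\alpha}\delsans_{\beta}x^{a}\bigr)\bigr|\;\lesssim\; \Kref\,\epsm\,(1+t+r)^{\theta-2-(p-k)},
\]
with the same type of support localization. Combining this with the bound $|\greft^{\alpha\beta}|^{\widetilde{\Kscr}}_{p}\lesssim 1$ (which follows from $\greft^{\alpha\beta}=\etasans^{\alpha\beta}+\widetilde{H_{\mathrm{r}}}^{\alpha\beta}$ and \eqref{eq2-14-sept-2025}) and the product rule, we obtain the desired bound in the \emph{old}-chart admissible norm.

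To transfer to the new-chart norm $|\cdot|^{\Kscr}_{p,k}$, I would invoke Lemma~\ref{lem1-13-sept-2025}: each conversion step costs a factor of size $|C_{m}\theta^{-1}|\,r^{\theta}$ and, possibly, one extra translation derivative $\delsans$. Plugging in the previous estimate and summing the geometric series in $\ord(K)$, the extra factors $\bigl(|C_{m}\theta^{-1}|\,r^{\theta-1}\bigr)^{\ord(K)}$ are uniformly bounded on $\{\rsans\geq \tsans/2\}$ provided $\Kref\,\epsm$ is small and $\theta<1$, which yields \eqref{eq3-22-sept-2025}. Finally, for \eqref{eq4-22-sept-2025}, the partial-derivative estimate, I would use the standard Lindblad--Rodnianski rank/order trade-off $|\del u|^{\Kscr}_{N}\lesssim (1+|r-t|)^{-1}|u|^{\Kscr}_{N+1}$ (compare \eqref{eq3-01-oct-2025}), applied to $u={\Gamma_{\mathrm{r}}}^{\lambda}$, and chain it with \eqref{eq3-22-sept-2025} just established.

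The step I expect to be the most delicate is the bookkeeping in the translation between the two admissible families: the coordinate change \eqref{equa-coord-transform} is itself of size $\Kref\,\epsm\,r^{\theta}$, so Lemma~\ref{lem1-13-sept-2025} mixes $\widetilde{\Kscr}$-orders and $\Kscr$-orders in a nontrivial way, and the geometric series in $\ord(K)$ closes only because the extra factor $r^{\theta-1}$ picked up each time the additional translation hits the degree-$(\theta-2)$ function is integrable and stays small uniformly on the support $\{\rsans\geq\tsans/2\}$. The remaining difficulty is to check that the small contribution from $\widetilde{W}_{\mathrm{r}}^{\mu}\Psisans_{\mu}^{\lambda}$ and from the perturbation $\widetilde{H_{\mathrm{r}}}^{\alpha\beta}$ of $\etasans^{\alpha\beta}$ in $\greft^{\alpha\beta}$ do not destroy the final rate $(1+t+r)^{-2+\theta}$; by \eqref{eq8-21-sept-2025} and Corollaries~\ref{cor1-14-sept-2025}--\ref{cor2-21-sept-2025} both of these contributions decay strictly faster than the main term, so they can be absorbed.
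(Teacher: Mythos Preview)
Your proposal is correct and follows essentially the same route as the paper. Your starting identity ${\Gamma_{\mathrm{r}}}^{\lambda}=\widetilde{W}_{\mathrm{r}}^{\mu}\Psisans_{\mu}^{\lambda}-\greft^{\alpha\beta}\delsans_{\alpha}\delsans_{\beta}x^{\lambda}$ is algebraically equivalent to the paper's formula \eqref{eq4-21-sept-2025}; the paper just rewrites the main piece in new-chart form as $\gref^{\alpha\beta}\del_{\alpha}(\Phisans_{\beta}^{\delta})\Psisans_{\delta}^{\lambda}$ and then invokes Corollary~\ref{cor1-14-sept-2025} directly, whereas you estimate in the old-chart $\widetilde{\Kscr}$-norm by homogeneity and transfer via Lemma~\ref{lem1-13-sept-2025}. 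Since Corollary~\ref{cor1-14-sept-2025} is itself proved by exactly that transfer, the two arguments are the same in substance.

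The one genuine difference is in \eqref{eq4-22-sept-2025}: the paper differentiates \eqref{eq4-21-sept-2025} term by term and bounds each of the five resulting pieces separately (the dominant one being $\gref^{\alpha\beta}\del_{\gamma}\del_{\alpha}(\Phisans_{\beta}^{\delta})\Psisans_{\delta}^{\lambda}\sim\Kref\epsm(1+t+r)^{-3+\theta}$), while you invoke the rank/order trade-off $(1+|r-t|)\,|\del u|^{\Kscr}_{N}\lesssim|u|^{\Kscr}_{N+1}$ and chain it with \eqref{eq3-22-sept-2025}. Your route is shorter and avoids the Leibniz bookkeeping; it works because \eqref{eq3-22-sept-2025} is already stated at order $N+1$, exactly one more than needed on the left of \eqref{eq4-22-sept-2025}.
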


\begin{proof}
\bse
We recall the following relation, essentially Eq.~(5.5) in \cite{KauffmanLindblad}, namely in our notation:
\begin{equation}\label{eq4-21-sept-2025}
\gref^{\alpha\beta}{\Gamma_{\mathrm{r}}}_{\alpha\beta}^{\lambda}
=\gref^{\alpha\beta}\del_{\alpha}\big(\Phisans_{\beta}^{\delta}\big)\Phisans_{\delta}^{\lambda} + \greft^{\mu\nu}\widetilde{\Gamma_{\mathrm{r}}}_{\mu\nu}^{\delta}\Phisans_{\delta}^{\lambda}
= \gref^{\alpha\beta}\del_{\alpha}\big(\Phisans_{\beta}^{\delta}\big)\Phisans_{\delta}^{\lambda} + \widetilde{\Gamma_{\mathrm{r}}}^{\delta}\Phisans_{\delta}^{\lambda}. 
\end{equation}
To derive it, we observe that 
\be
\del_{\alpha}\del_{\beta} = \Phisans_{\alpha}^{\mu}\Phisans_{\beta}^{\nu}\delsans_{\mu}\delsans_{\nu} + \del_{\alpha}(\Phisans_{\beta}^{\delta})\delsans_{\delta}, 
\ee
which leads us to
\be
g^{\alpha\beta}\del_{\alpha}\del_{\beta} 
= \gsans^{\mu\nu}\delsans_{\mu}\delsans_{\nu} 
+ g^{\alpha\beta}\del_{\alpha}(\Phisans_{\beta}^{\delta})\delsans_{\delta}.
\ee
\ese

On the other hand, by the invariance of the wave operator, we have 
\be
\gref^{\alpha\beta}\del_{\alpha}\del_{\beta} 
- \gref^{\alpha\beta}{\Gamma_\mathrm{r}}_{\alpha\beta}^{\lambda}\del_{\lambda} 
= 
\greft^{\alpha\beta}\delsans_{\alpha}\delsans_{\beta} 
- \greft^{\alpha\beta}\widetilde{\Gamma_{\mathrm{r}}}_{\alpha\beta}^{\lambda} \,\delsans_{\lambda}.
\ee
We thus obtain \eqref{eq4-21-sept-2025}.

\bse
Next, we recall Corollary~\ref{cor1-14-sept-2025}, and thus 
\begin{equation}\label{eq1-22-sept-2025}
\big|\Kscr^I\del_{\alpha}\big(\Phisans_{\beta}^{\delta}\big)\big|\lesssim_N \Kref\epsm (t+r+1)^{-2+\theta}.
\end{equation}
Thanks to \eqref{eq8-21-sept-2025} and Lemma~\ref{lem1-13-sept-2025}, we also have 
\begin{equation}\label{eq2-22-sept-2025}
\big|\Kscr^I\widetilde{\Gamma_{\mathrm{r}}}^{\lambda}\big|
\lesssim_N \epss(t+r+1)^{-4+(N+2)\theta} \leq \epss(t+r+1)^{-3},
\quad \ord(I)\leq N+1,
\end{equation}
where the last inequality follows from $(N+3)\theta\leq 1$. Substituting \eqref{eq1-22-sept-2025} and \eqref{eq2-22-sept-2025} together with \eqref{eq4-23-sept-2025} into \eqref{eq4-21-sept-2025}, we obtain \eqref{eq3-22-sept-2025}.
\ese
\bse
For \eqref{eq4-22-sept-2025}, using again Corollary~\ref{cor1-14-sept-2025}, we have 
\begin{equation}
\big|\Kscr^I\del_{\alpha}\del_{\mu}\big(\Phisans_{\beta}^{\delta}\big)\big|\lesssim_N \Kref\epsm (1+t+r)^{-3+\theta},\quad \ord(I)\leq N.
\end{equation}
Then we substitute this estimate together with \eqref{eq2-22-sept-2025} (applied to $\Kscr^I\del_{\alpha}\widetilde{\Gamma_{\mathrm{r}}}^{\lambda}$ with $|I|\leq N$) and \eqref{eq5-23-sept-2025} into
\be
\aligned
\del_{\gamma}\big(\gref^{\alpha\beta}{\Gamma_{\mathrm{r}}}_{\alpha\beta}^{\lambda}\big)
& =\del_{\gamma}\gref^{\alpha\beta}\del_{\alpha}\big(\Phisans_{\beta}^{\delta}\big)\Phisans_{\delta}^{\lambda} 
+\gref^{\alpha\beta}\del_{\gamma}\del_{\alpha}\big(\Phisans_{\beta}^{\delta}\big)\Phisans_{\delta}^{\lambda} 
+\gref^{\alpha\beta}\del_{\alpha}\big(\Phisans_{\beta}^{\delta}\big)\del_{\gamma}\big(\Phisans_{\delta}^{\lambda} \big)
\\
& \quad + \del_{\gamma}\big(\widetilde{\Gamma_{\mathrm{r}}}^{\delta}\big)\Phisans_{\delta}^{\lambda}
+\widetilde{\Gamma_{\mathrm{r}}}^{\delta}
\del_{\gamma}\big(\Phisans_{\delta}^{\lambda}\big).
\endaligned
\ee
We thus conclude that \eqref{eq4-22-sept-2025} holds. 
\ese
\end{proof}

%--------------------------------------------------------------

\paragraph{Conclusion.}

We arrive at a complete description of the new coordinates.

\begin{proposition}[Existence of a light-bending coordinate chart]
\label{prop1-01-oct-2025}
Consider a spacetime $(\Mcal,\gref)$ where $\Mcal$ is diffeomorphic to $\RR^{1+3}_+ = \{(t,x)\,|\,x\in\RR^3,\ t\geq 0\}$.  Assume that there is a global coordinate chart $\{\xsans^{\alpha}\}$ such that $\gref$ is an $(N,\theta,\epss,\ell)$--admissible reference metric with $(N+3)\theta\leq 1$. Then there exist two universal constants $K_0,K_1$ and some $\eps_0>0$ such that for any $0<\epss\leq\epsm\leq\eps_0$ and any $C_L$ satisfying 
\be
0\leq C_L\leq \frac{1}{K_1\epsm} - K_0,
\ee
so that in the new coordinate chart $\{x^{\alpha}\}$ defined by
\begin{equation}
t = \tsans,\quad x^a = (r/\rsans)\xsans^a
\end{equation}
with
\begin{equation}
r = \rsans - \Kref\epsm\chi(\rsans/\tsans)\rsans^{\theta},\quad \Kref = (K_0+C_L)\theta^{-1}, 
\end{equation}
where $\chi$ is the cut-off function defined in \eqref{eq8-24-sept-2025}, the reference metric $\gref$ satisfies the following properties, provided that $\Kref\epsm$ is sufficiently small. 

\begin{subequations}
\noindent $\bullet$ Almost Minkowski condition:
\begin{equation}\label{eq-Mink-conditions}
\aligned
|\Hreff_{\alpha\beta}\big|^{\Kscr}_p + \big|\Hreff^{\alpha\beta}\big|^{\Kscr}_p
& \lesssim_N  \Kref\epsm(t+r+1)^{-1+(p+1)\theta},\quad 
&& p\leq N+2,
\\
\big|\del_{\mu}\gref^{\alpha\beta}\big|^{\Kscr}_p
+ \big|\del_{\mu}\gref_{\alpha\beta}\big|^{\Kscr}_p
& \lesssim_N  \Kref \epsm(1+|r-t|)^{-1}(t+r+1)^{-1+(p+1)\theta},\quad 
&& p\leq N+1,
\\
\big|\dels_{\mu}\gref^{\alpha\beta}\big|^{\Kscr}_p
 + \big|\dels_{\mu}\gref_{\alpha\beta}\big|^{\Kscr}_p
& \lesssim_N  \Kref\epsm (t+r+1)^{-2+(p+1)\theta},\quad 
&& p\leq N+1,
\\
\big|\del_{\mu}\del_{\nu}\gref^{\alpha\beta}\big|^{\Kscr}_p
+ 
 \big|\del_{\mu}\del_{\nu}\gref_{\alpha\beta}\big|^{\Kscr}_p
& \lesssim_N  \Kref \epsm(1+|r-t|)^{-2}(t+r+1)^{-1+(p+1)\theta},\quad 
&& p\leq N,
\\
\big|\del_{\mu}\dels_{\nu}\gref^{\alpha\beta}\big|^{\Kscr}_p
+ \big|\del_{\mu}\dels_{\nu}\gref_{\alpha\beta}\big|^{\Kscr}_p
& \lesssim_N  \Kref \epsm(1+|r-t|)^{-1}(t+r+1)^{-2+(p+1)\theta},\quad 
&& p\leq N.
\endaligned
\end{equation}

\noindent$\bullet$ Almost Ricci-flat condition:
\begin{equation}\label{eq-Ricci-condition}
\big|R[\gref]_{\alpha\beta}\big|_{N} + (1+r+t)\big|\del_{\gamma}R[\gref]_{\alpha\beta}\big|_{N-1}\lesssim_N \epss^2(1+t+r)^{-4+\theta}. 
\end{equation}

\noindent $\bullet$ Generalized wave coordinate condition:
\begin{equation}\label{eq-wave-condition}
\big|{\Gamma_{\mathrm{r}}}^{\lambda}\big|_{N+1} 
+ (1+|r-t|)\big|\del_{\gamma}{\Gamma_{\mathrm{r}}}^{\lambda}\big|_N\lesssim_N \Kref\epsm(t+r+1)^{-2+\theta};
\end{equation}
\noindent$\bullet$ Uniform light-bending condition:
\begin{equation}\label{eq-lb-condition}
\Hreff^{\Ncal 00} := (\diff \tavec{} - \diff \ravec{}, \diff \tavec{} - \diff \ravec{})_{\gref} 
< -  C_L\epsm r^{-1+\theta}
\qquad \text{ in } \Mcal_{[s_0,s_1]} \cap\{ r \geq (3/4) t \}.
\end{equation}
\end{subequations}
\end{proposition}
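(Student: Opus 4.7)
The proposition is essentially a synthesis of results already established in Section~\ref{section===61} and its successors. With the choice $C_m = \Kref\epsm$ (and $\Kref=(K_0+C_L)\theta^{-1}$), the coordinate transformation defined in the proposition coincides exactly with the one of Proposition~\ref{lem1-12-sept-2025} applied to $g=\gref$. The plan is therefore to verify that the smallness hypothesis ($\Kref\epsm$ sufficiently small) ensures the applicability of each auxiliary lemma, and then to read off the four conclusions from results already at our disposal.

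\textbf{Direct extractions.} Once $\Kref\epsm$ is small, the Jacobian matrix in \eqref{equa-18mai-1} is invertible, so $\{x^\alpha\}$ is a genuine coordinate chart on $\Mcal$, and the region inclusions $\{r\le t/2\}=\{\rsans\le \tsans/2\}$ and $\{r\ge 3t/4\}\subset \{\rsans\ge 5\tsans/8\}$ recalled in Section~\ref{section===61} hold. The uniform light-bending bound \eqref{eq-lb-condition} then follows from \eqref{eq3-07-sept-2025} in Proposition~\ref{lem1-12-sept-2025}: that proposition provides the sign control in $\{\rsans\ge 5\tsans/8\}$ and, since on the restricted region $r\simeq \rsans$, the pointwise bound $-C_L\epsm\rsans^{-1+\theta}$ transfers to $-C_L\epsm r^{-1+\theta}$ after readjusting constants into $C_L$. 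The generalized wave coordinate estimates \eqref{eq-wave-condition} are precisely Proposition~\ref{lem1-26-sept-2025} (using $\mathscr{Z}\subset\Kscr$). The first, second, fourth, and fifth lines of \eqref{eq-Mink-conditions} are contained in Corollary~\ref{cor1-23-sept-2025}, whose hypothesis $(N+3)\theta\le 1$ coincides with the one assumed here.

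\textbf{Remaining items.} Two ingredients require a small additional argument. First, the good-derivative almost-Minkowski estimate (third line of \eqref{eq-Mink-conditions}) is not in Corollary~\ref{cor1-23-sept-2025}. To obtain it I will express $\dels_\mu$ in the new chart in terms of its old-chart analogue via \eqref{eq1-27-sept-2025} and decompose the metric components via \eqref{eq5-14-sept-2025}, which reduces matters at zero order to the second bound of \eqref{eq2-14-sept-2025} (the tangential decay $\epss(t+r+1)^{-1}$ proved in \eqref{eq7-14-sept-2025}). The higher-order version is then obtained by commuting $\Kscr^I$ through using Lemma~\ref{lem1-13-sept-2025} together with \eqref{eq3-01-oct-2025}, exactly as in the proof of Corollary~\ref{cor1-23-sept-2025}. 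Second, the almost Ricci-flat condition \eqref{eq-Ricci-condition} is handled by tensoriality: since the Ricci tensor is a $(0,2)$-tensor,
\begin{equation*}
R[\gref]_{\alpha\beta} = \Psisans_\alpha^\mu\,\Psisans_\beta^\nu\,\widetilde R[\gref]_{\mu\nu} = \Psisans_\alpha^\mu\,\Psisans_\beta^\nu\,\widetilde U_{\mathrm{r},\mu\nu}
\end{equation*}
by the almost Ricci-flatness assumption of admissibility. Differentiating with $\Kscr^I$, applying the Leibniz rule, bounding $\Psisans=\mathrm{Id}+O(\Kref\epsm\,r^{-1+\theta})$ and its derivatives via Corollary~\ref{cor1-14-sept-2025}, and translating admissible derivatives from the new chart into the old chart via Lemma~\ref{lem1-13-sept-2025}, we recover the bound on $\widetilde{\Kscr}^J\widetilde U_\mathrm{r}$ from \eqref{eq8-13-sept-2025}; the resulting factors $(t+r+1)^{(p+1)\theta}$ are absorbed into a harmless constant by the standing hypothesis $(N+3)\theta\le 1$.

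\textbf{Main obstacle.} The most delicate point is the bookkeeping of the derivative-level statements: by Lemma~\ref{lem1-13-sept-2025}, each geometric vector field $Z\in\Kscr$ in the new chart equals an old-chart admissible field $\Zsans\in\widetilde{\Kscr}$ plus a correction carrying a factor of order $|C_m\theta^{-1}|\,r^\theta$, and it is this correction that produces the $(p+1)\theta$ growth exponent seen in \eqref{eq-Mink-conditions}. One has to verify that these corrections never accumulate faster than what the smallness of $\Kref\epsm$ and the constraint $(N+3)\theta\le 1$ can absorb, and in particular that the sharper tangential decay on the third line of \eqref{eq-Mink-conditions} is not destroyed when passing between frames. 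Once this accounting is done carefully---following exactly the template already used in Lemma~\ref{lem1-13-sept-2025} and Corollary~\ref{cor1-23-sept-2025}---all four conclusions fall into place.
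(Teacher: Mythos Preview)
Your proposal is correct and matches the paper's treatment: the proposition is presented there under a ``Conclusion'' heading without a separate proof, precisely because it collects Proposition~\ref{lem1-12-sept-2025}, Corollary~\ref{cor1-23-sept-2025}, and Proposition~\ref{lem1-26-sept-2025}, together with tensoriality of the Ricci curvature. Two minor slips worth correcting: (i) in the Ricci-flat step the covariant transformation law uses $\Phisans_\alpha^\mu\Phisans_\beta^\nu$, not $\Psisans$ (compare \eqref{eq5-20-sept-2025}), though since both are $\mathrm{Id}+O(\Kref\epsm r^{-1+\theta})$ by Corollary~\ref{cor1-14-sept-2025} the estimate is unaffected; (ii) for the third line of \eqref{eq-Mink-conditions} you cite \eqref{eq2-14-sept-2025} and \eqref{eq7-14-sept-2025}, but those concern derivatives of \emph{good components} $\slashed{H}_{\mathrm{r}}$, whereas what is needed is \emph{good derivatives} $\dels_\mu$ of general components---the correct input at zero order is the second line of \eqref{eq3-01-oct-2025}, which you do invoke for the higher-order step.
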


\begin{remark}
Interestingly, the generalized wave coordinate condition \eqref{eq-wave-condition} introduced here is weaker than the assumptions imposed in \cite{PLF-YM-PDE} (for Class~B reference metrics), and it is also weaker than the original bounds in \eqref{eq8-21-sept-2025}. This seems surprising at first; however, in the next section, by exploiting a \emph{better} formulation of the Einstein equations, we show that the coupling between the components of the wave coordinate condition is much weaker.
\end{remark}

}

%---------------------------------------------------------------------------------------------------------------

\subsection{ Formulation of Einstein equations in the new coordinates}

{ 

\paragraph{Generalized wave coordinate condition.}

We now consider an admissible reference $\gref$ of the form defined in Section~\ref{subsec1-20-sept-2025} in the coordinates $\{\xsans^{\alpha}\}$. Then we seek a solution $g$ to the Einstein equations \eqref{eq4-13-sept-2025} which satisfies the generalized wave coordinate condition in these coordinates $\{\xsans^{\alpha}\}$:
\begin{equation}\label{eq5-24-sept-2025}
\Gammasans^{\lambda} = \widetilde{\Gamma_{\mathrm{r}}}^{\lambda} = \Wsans^{\lambda}.
\end{equation} 
We recall (see, e.g., the textbook~\cite[Section~6.8.1, Lemmas~8.1--8.2]{YCB}) that if the generalized wave coordinate condition \eqref{eq5-24-sept-2025} is satisfied on an initial spacelike hypersurface by a sufficiently regular solution to \eqref{eq4-13-sept-2025}, then this condition is preserved for as long as the (sufficiently regular) solution exists.

First, in the new coordinate chart $\{x^{\alpha}\}$, we have 
\begin{equation}\label{eq4-20-sept-2025}
\Rbb^{\mathrm{w}}(g,\del g,\del\del g)_{\alpha\beta} 
= \Big(T_{\alpha\beta} - \frac{1}{2}\mathrm{Tr}_g(T)g_{\alpha\beta}\Big) 
- \mathbb{G}(g,\del g,\Gamma,\del\Gamma)_{\alpha\beta}.
\end{equation} 
Second, we claim that the reference metric $\gref$ satisfies 
\begin{equation}\label{eq5-20-sept-2025}
\Rbb^{\mathrm{w}}(\gref,\del\gref,\del\del\gref)_{\alpha\beta} = \Phisans_{\alpha}^{\mu}\Phisans_{\beta}^{\nu}\widetilde{U_{\mathrm{r}}}_{\mu\nu} 
- \mathbb{G}(\gref,\del \gref,\Gamma_{\mathrm{r}},\del\Gamma_{\mathrm{r}})_{\alpha\beta}, 
\end{equation}
since, in the coordinates $\{\xsans^{\alpha}\}$, we assume that $\gref$ satisfies the almost Ricci-flat condition. Indeed, we first note that
\begin{equation}
\Rbb(\greft,\delsans\greft,\delsans\delsans\greft,\Gammasans,\delsans\Gammasans)_{\alpha\beta}
= \widetilde{U_{\mathrm{r}}}_{\alpha\beta}.
\end{equation}
On the other hand, by the tensoriality of the Ricci curvature, for any metric $g$ we can write 
\begin{equation}
\Rbb(g,\del g,\del\del g, \Gamma,\del\Gamma)_{\alpha\beta}\diff x^{\alpha}\otimes\diff x^{\beta} = \Rbb(\gsans,\delsans\gsans,\delsans\delsans\gsans, \Gammasans,\delsans\Gammasans)_{\mu\nu}\diff \xsans^{\mu}\otimes\diff \xsans^{\nu},
\end{equation}
which, in our notation, implies that 
\begin{equation}
\Rbb(g,\del g,\del\del g, \Gamma,\del\Gamma)_{\alpha\beta} 
= \Phisans_{\alpha}^{\mu}\Phisans_{\beta}^{\nu}
\Rbb(\gsans,\delsans\gsans,\delsans\delsans\gsans, \Gammasans,\delsans\Gammasans)_{\mu\nu}, 
\end{equation}
which establishes \eqref{eq5-20-sept-2025}.

Next, comparing \eqref{eq4-20-sept-2025} and \eqref{eq5-20-sept-2025}, we conclude that 
\begin{equation}\label{eq6-20-sept-2025}
\aligned
& \Rbb^{\mathrm{w}}(g,\del g,\del\del g)_{\alpha\beta} - \Rbb^{\mathrm{w}}(\gref,\del\gref,\del\del\gref)_{\alpha\beta} 
\\
& =\Big(T_{\alpha\beta} - \frac{1}{2}\mathrm{Tr}_g(T)g_{\alpha\beta}\Big) 
- \Phisans_{\alpha}^{\mu}\Phisans_{\beta}^{\nu}\widetilde{U_{\mathrm{r}}}_{\mu\nu} 
- \big(\mathbb{G}(g,\del g,\Gamma,\del\Gamma)_{\alpha\beta} - \mathbb{G}(\gref,\del \gref,\Gamma_{\mathrm{r}},\del\Gamma_{\mathrm{r}})_{\alpha\beta}\big).
\endaligned
\end{equation}
At this juncture, we emphasize that, in \eqref{eq6-20-sept-2025}, rather than working with the single term $\mathbb{G}(g,\del g,\Gamma,\del\Gamma)_{\alpha\beta}$, it is preferable to consider the \emph{difference} $\big(\mathbb{G}(g,\del g,\Gamma,\del\Gamma)_{\alpha\beta} - \mathbb{G}(\gref,\del \gref,\Gamma_{\mathrm{r}},\del\Gamma_{\mathrm{r}})_{\alpha\beta}\big)$, which enjoys significantly better decay properties (as shown in the discussion below). This is why we can accommodate a mildly decaying wave–gauge condition \eqref{eq-wave-condition} in the new coordinate chart $\{x^{\alpha}\}$. (For the case $\gref=\Mks$, see also \cite[Appendix~B]{KauffmanLindblad}.) 

\paragraph{Einstein equations formulated in the new coordinates.}

We now write \eqref{eq6-20-sept-2025} as a system of PDEs. To proceed, we introduce
\be
u_{\alpha\beta}:=g_{\alpha\beta}- \gref_{\alpha\beta}.
\ee
Recall that the nonlinear term $\Fbb$ is a multilinear form which is quadratic in the first derivatives and depends (rationally) on the metric. We record it as
\be
\Fbb(g,\del g)=\Fbb(g,g;\del g,\del g),
\ee
where $\Fbb(\cdot,\cdot;\star,\star)$ is linear in each argument and symmetric in the first two and in the last two entries. Then we have 
\begin{equation}
\aligned
\Fbb(g,g;\del g,\del g)- \Fbb(\gref,\gref;\del\gref,\del\gref)
& = \Fbb(g,g;\del u,\del u)
+2\,\Fbb(g,g;\del\gref,\del u)
\\
& \quad
+2\,\Fbb(u,\gref;\del\gref,\del\gref)
+ \Fbb(u,u;\del\gref,\del\gref).
\endaligned
\end{equation}
Consequently, we find 
\begin{equation}
\aligned
& \Rbb^{\mathrm{w}}(g,\del g,\del\del g)_{\alpha\beta}
- \Rbb^{\mathrm{w}}(\gref,\del\gref,\del\del\gref)_{\alpha\beta}
\\
& \qquad
= - \frac{1}{2}g^{\mu\nu}\del_{\mu}\del_{\nu}u_{\alpha\beta}
+\frac{1}{2}\Fbb(g,g;\del u,\del u)
+\frac{1}{2}u^{\mu\nu}\del_{\mu}\del_{\nu}\gref_{\alpha\beta}
\\
& \qquad\quad
+\Fbb(g,g;\del\gref,\del u)
+\Fbb(u,\gref;\del\gref,\del\gref)
+\frac{1}{2}\Fbb(u,u;\del\gref,\del\gref).
\endaligned
\end{equation}
These terms were already analyzed in our previous work~\cite{PLF-YM-PDE}. In the present setting, however, $u^{\alpha\beta}$ measures the correction of $g$ from $\gref$ in the \emph{new} coordinate chart, and we will use the following previous estimate (stated next and proved for completeness).

\begin{lemma}\label{lem1-24-sept-2025}
Consider $u^{\alpha\beta} = g^{\alpha\beta} - \gref^{\alpha\beta}$ where $\gref$ is an $(N,\theta,\epss,\ell)$–admissible reference on $\Mcal$ with $(N+3)\theta\leq 1$ and $\Kref\epsm$ is sufficiently small. Assume, in addition, that
\begin{equation}\label{eq1-23-sept-2025}
|u|_{[N/2]+1}\leq \tfrac12 \quad \text{in } \Mcal_{[s_0,s_1]}.
\end{equation}
Then, for every $(p,k)$ with $p\le N+1$, one has 
\begin{equation}
|u^{\alpha\beta}|_{p,k} \lesssim_N |u|_{p,k}.
\end{equation}
\end{lemma}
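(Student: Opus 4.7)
The plan is to start from the algebraic identity
\[
u^{\alpha\beta} = -\,g^{\alpha\mu}\,u_{\mu\nu}\,\gref^{\nu\beta},
\]
which I would derive by combining $g^{\alpha\mu}g_{\mu\nu}\gref^{\nu\beta}=\gref^{\alpha\beta}$ and $g^{\alpha\mu}\gref_{\mu\nu}\gref^{\nu\beta}=g^{\alpha\beta}$ and subtracting. The key structural feature is that the right-hand side is explicitly \emph{linear} in $u_{\mu\nu}$, with coefficients made only of $g^{-1}$ and $\gref^{-1}$. Substituting $g^{\alpha\mu}=\gref^{\alpha\mu}+u^{\alpha\mu}$ on the right-hand side and iterating would then produce the Neumann-type expansion
\[
u^{\alpha\beta}=\sum_{j\geq 1}(-1)^j\,\gref^{\alpha\mu_1}u_{\mu_1\nu_1}\gref^{\nu_1\mu_2}u_{\mu_2\nu_2}\cdots u_{\mu_j\nu_j}\gref^{\nu_j\beta},
\]
whose absolute convergence I would ensure by the smallness assumption $|u|_{[N/2]+1}\leq 1/2$ together with the uniform bound $|\gref^{-1}|\lesssim 1$ provided by Corollary~\ref{cor1-23-sept-2025}.

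Next, I would apply an admissible operator $\mathscr{Z}^I$ of type $(p,k)$ with $p\leq N+1$ and expand by the Leibniz rule, distributing $I$ among the $2j+1$ factors of each term in the series. The basic combinatorial observation, which is exactly the one used in the proof of Lemma~\ref{lem1-11-july-2025}, is that in any such product at most one of the $2j+1$ factors can carry derivative order strictly exceeding $[p/2]$. For the remaining factors I would invoke Corollary~\ref{cor1-23-sept-2025} to bound the derivatives of $\gref^{\nu\beta}$ (the undifferentiated entries contribute $O(1)$, while any positive number of derivatives lands on $\Hreff$ and so contributes a factor $O(\Kref\epsm)$), and I would use the hypothesis $|u|_{[N/2]+1}\leq 1/2$ to bound every low-order $u$-factor by a small constant.

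If the single top-order factor lands on a $u$-entry $u_{\mu\nu}$, the corresponding product is immediately bounded by $|u|_{p,k}$ times an $N$-dependent constant. If instead the top-order factor lands on a $\gref^{-1}$ factor, its contribution carries an extra $O(\Kref\epsm)$ together with a product in which every $u$-factor has derivative order at most $[p/2]$, and is therefore controlled by a constant multiple of $|u|_{p,k}$ through the trivial inequality $|u|_{p',k'}\leq |u|_{p,k}$ for $p'\leq p$. Summing over $j\geq 1$ would then produce a geometric series whose ratio is bounded by a constant times $|u|_{[N/2]+1}\leq 1/2$, which converges and yields the claimed bound with an $N$-dependent constant.

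The only real obstacle I anticipate is the combinatorial bookkeeping in the Leibniz expansion, which must be carried out carefully to ensure that in each product no two factors can simultaneously carry derivative order larger than $[p/2]$. This step is already the backbone of the analogous proof of Lemma~\ref{lem1-11-july-2025}, and I expect it to transfer directly here, with the linearity of the starting identity in $u_{\mu\nu}$ playing the essential simplifying role that avoids the otherwise polynomial dependence on $u$ that would appear from expanding $g^{\alpha\beta}$ via the Neumann series for $H=\Hreff+u$ itself.
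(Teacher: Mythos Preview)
Your approach is essentially the paper's: the identity $u^{\alpha\beta}=-g^{\alpha\mu}u_{\mu\nu}\gref^{\nu\beta}$, once iterated, reproduces exactly the Neumann series $g^{-1}-\gref^{-1}=\sum_{k\ge1}(-1)^k(\gref^{-1}u)^k\gref^{-1}$ that the paper writes down directly, and the subsequent Leibniz plus one-high/rest-low argument is identical.

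One point needs sharpening. Your claim that the resulting sum over $j$ is ``a geometric series whose ratio is bounded by a constant times $|u|_{[N/2]+1}\le 1/2$'' is imprecise: the number of Leibniz terms arising from the $j$-th Neumann summand grows like $(j+1)^{|I|}$, so the bound on the differentiated $j$-th term is a \emph{polynomial} in $j$ times $(1/2)^{j-1}|u|_{p,k}$, not a pure geometric term. This is still summable, but the paper devotes an explicit combinatorial step to verifying that this count is a polynomial of degree at most $N$ in $j$. Note also that your proposed transfer from Lemma~\ref{lem1-11-july-2025} does not quite apply: there the expansion terminates at $k=|I|$ because each sub-index must be nonempty, so no infinite-series convergence issue arises, whereas here the Neumann series is genuinely infinite. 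The obstacle you flag (at most one factor carries order $>[p/2]$) is the easy pigeonhole part; the counting is the part that needs the extra work.
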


%-----------------------------------------------------------------------

\begin{proof} 
\bse
{\bf 1.} 
We define $g = (g_{\alpha\beta})_{\alpha\beta}$, $\gref = (\gref_{\alpha\beta})_{\alpha\beta}$, and $u = \big(u_{\alpha\beta}\big)_{\alpha\beta}$ the matrices formed by the components of the corresponding tensors. Then we have 
\be
\big(u^{\alpha\beta}\big)_{\alpha\beta} = g^{-1}- \gref^{-1},
\ee
and the following identity holds 
\begin{equation}\label{eq2-24-sept-2025}
g^{-1}- \gref^{-1} = \sum_{k=1}^{\infty}(-1)^k(\gref^{-1} u)^k\,\gref^{-1},
\end{equation}
provided the series converges, which follows from the smallness assumption \eqref{eq1-23-sept-2025} (after possibly lowering the smallness threshold so that $\|\gref^{-1}u\|_{[N/2]+1}\le \tfrac12$).

When we differentiate the right-hand side, we have, for each term,
\begin{equation}\label{eq7-23-sept-2025}
\mathscr{Z}^I\big((\gref^{-1} u)^k\gref^{-1}\big) 
= \sum_{I_0\odot I_1\odot\cdots\odot I_k = I} \,  \Big(\prod_{j=1}^k \mathscr{Z}^{I_j}(\gref^{-1}u)\Big)\,\mathscr{Z}^{I_0}(\gref^{-1}).
\end{equation}
Recalling  \eqref{eq4-23-sept-2025} from Corollary~\ref{cor1-23-sept-2025}, for any admissible operator $\mathscr{Z}^I$ of type $(p,k)$ we have 
\begin{equation}
\big|\mathscr{Z}^I(\gref^{-1}u)\big|\lesssim_N |u|_{p,k}.
\end{equation} 
In each product of \eqref{eq7-23-sept-2025}, choose one distinguished factor to bear all high-order derivatives (contributing $\lesssim_N |u|_{p,k}$) and bound all other $k-1$ factors in the low norm by \eqref{eq1-23-sept-2025}:
\be
\|\gref^{-1}u\|_{[N/2]+1}\le \tfrac12.
\ee
Hence every term in \eqref{eq7-23-sept-2025} is bounded by $2^{-(k-1)}|u|_{p,k}$ up to a constant depending only on $N$.
\ese

\vskip.15cm

{\bf 2.} 
To guarantee the convergence, we need to count the number of terms in the right-hand side of \eqref{eq7-23-sept-2025}. We claim that this number, say $C_p(k+1)$, is controlled by a polynomial of degree $\leq N$  acting on $k$ with coefficients determined by $N$. This combinatorial problem can be solved as follows.

We distribute the integers $A_p = \{1,2,\cdots,p\}$ (corresponding to the vectors contained in $\mathscr{Z}^I$) into $(k+1)$ baskets (corresponding to the factors). Each basket can receive more than one integer, and can also be left empty. One manner of distribution corresponds to one term in the right-hand side of \eqref{eq7-23-sept-2025} (of course there will be identical terms that we can collect together, but the coefficients count). This number $C_p(k)$ is strictly increasing with respect to $p$ because for the case $p+1$, one can merge two integers as one (always distribute them together). Thus we only need to control $C_N(k)$. Remark that for a fixed $1\leq j\leq N$, we can decompose $A_N = \{1,2,\cdots,N\}$ into $j$ non-empty subsets. The number of different decompositions is denoted by $D_N(j)$ which is determined by $N$ and $j$. Then we distribute these $j$ subsets (different from each other) into $(k+1)$ baskets, and there are 
\be
(k\!+\!1)k(k\!- \!1)\cdots(k\!- \!j\!+\!2)  =  \frac{(k\!+\!1)!}{(k\!+\!1\!- \!j)!}  \leq (k\!+\!1)^j
\ee
manners of distribution. Then in total, we have 
\begin{equation}\label{eq1-24-sept-2025}
C_N(k)\leq (k+1) + \sum_{j=2}^ND_N(j)(k+1)^j. 
\end{equation}

Next, based on \eqref{eq7-23-sept-2025} and \eqref{eq1-24-sept-2025}, we see that 
\be
\big|\mathscr{Z}^I\big((\gref^{-1}u)^k\gref^{-1}\big)\big|
\leq \frac{C_N(k+1)}{2^{\,k-1}} \,|u|_{p,k}.
\ee
This leads us to the absolute and uniform convergence of  
\begin{equation}
\sum_{k=1}^{\infty}\mathscr{Z}^I\big((\gref^{-1}u)^k\gref^{-1}\big)
\end{equation}
and 
\be
\Big|\sum_{k=1}^{\infty}\mathscr{Z}^I\big((\gref^{-1}u)^k\gref^{-1}\big)\Big|
\leq |u|_{p,k}\sum_{k=1}^{\infty} \frac{C_N(k+1)}{2^{\,k-1}} 
\lesssim_N \, |u|_{p,k}.
\ee
Thus we can differentiate \eqref{eq2-24-sept-2025} term-by-term, and obtain the desired estimate.
\end{proof}

%--------------------------------------------------------------- 

Then we turn our attention to the terms defining $\mathbb{G}$, which are entirely new ones in comparison to our previous work for the Einstein--massive scalar field system. We need the following estimate.

\begin{lemma}
\label{lem2-24-sept-2025}
Let $\gref$ be an $(N,\theta,\epss,\ell)$--admissible reference where $(N+3)\theta\leq 1$ and $\Kref\epsm$ is sufficiently small. Assume, in addition, that \eqref{eq5-24-sept-2025} holds, together with 
\begin{equation}
|u|_{[N/2]+1}\leq 1/2 \quad \text{ in } \Mcal_{[s_0,s_1]}.
\end{equation} 
Then one has 
\begin{equation}\label{eq6-24-sept-2025}
\big|\Gamma^{\lambda} - {\Gamma_{\mathrm{r}}}^{\lambda}\big|_p\lesssim_N \Kref\epsm(t+r+1)^{-2+\theta}|u|_p,\qquad p\leq N+1,
\end{equation}
\begin{equation}\label{eq7-24-sept-2025}
\big|\del_{\alpha}\big(\Gamma^{\lambda} - {\Gamma_{\mathrm{r}}}^{\lambda}\big)\big|_p
\lesssim_N \Kref\epsm(t+r+1)^{-2+\theta}|\del u|_p 
+ \Kref\epsm(t+r+1)^{-3+\theta}|u|_p,\quad p\leq N.
\end{equation}
Here, one recalls that $|u|_p:=\max_{\alpha,\beta}|u_{\alpha\beta}|_p$. 
\end{lemma}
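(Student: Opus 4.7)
The plan is to exploit the intertwining identity \eqref{eq4-21-sept-2025}, applied simultaneously to $g$ and $\gref$, in order to pass the wave-gauge assumption from the old chart $\{\xsans^\alpha\}$ to the new chart $\{x^\alpha\}$. Writing \eqref{eq4-21-sept-2025} for both metrics gives
\begin{equation*}
\Gamma^{\lambda} = g^{\alpha\beta}\del_{\alpha}\bigl(\Phisans_{\beta}^{\delta}\bigr)\Phisans_{\delta}^{\lambda} + \Gammasans^{\delta}\Phisans_{\delta}^{\lambda},
\qquad
{\Gamma_{\mathrm{r}}}^{\lambda} = \gref^{\alpha\beta}\del_{\alpha}\bigl(\Phisans_{\beta}^{\delta}\bigr)\Phisans_{\delta}^{\lambda} + \widetilde{\Gamma_{\mathrm{r}}}^{\delta}\Phisans_{\delta}^{\lambda}.
\end{equation*}
Since the generalized wave coordinate condition \eqref{eq5-24-sept-2025} reads $\Gammasans^{\delta} = \widetilde{\Gamma_{\mathrm{r}}}^{\delta}$, the two ``contracted Christoffel'' contributions cancel when we take the difference, and we obtain the clean identity
\begin{equation*}
\Gamma^{\lambda} - {\Gamma_{\mathrm{r}}}^{\lambda} = \bigl(g^{\alpha\beta} - \gref^{\alpha\beta}\bigr)\del_{\alpha}\bigl(\Phisans_{\beta}^{\delta}\bigr)\Phisans_{\delta}^{\lambda} = u^{\alpha\beta}\del_{\alpha}\bigl(\Phisans_{\beta}^{\delta}\bigr)\Phisans_{\delta}^{\lambda},
\end{equation*}
with the convention $u^{\alpha\beta}=g^{\alpha\beta}-\gref^{\alpha\beta}$ already used in Lemma~\ref{lem1-24-sept-2025} and in \eqref{eq2-22-oct-2025}.

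From here the derivation of \eqref{eq6-24-sept-2025} is a Leibniz expansion combined with three a priori ingredients. First, Corollary~\ref{cor1-14-sept-2025} applied to $\del_{\alpha}\bigl(\Phisans_{\beta}^{\delta}\bigr) = \del_{\alpha}\Xi_{\beta}^{\delta}$ provides, for every admissible operator $\Kscr^I$ of type $(p,k)$ with $p\leq N+1$, the pointwise bound $|\Kscr^I\del_{\alpha}(\Phisans_{\beta}^{\delta})| \lesssim_N \Kref\epsm(t+r+1)^{-2+\theta}$; this is precisely \eqref{eq1-22-sept-2025} used in the proof of Proposition~\ref{lem1-26-sept-2025}. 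Second, the same corollary gives $|\Kscr^I\Phisans_{\delta}^{\lambda}|\lesssim 1$ at zero order and a $(t+r+1)^{-1+\theta}$ decay at higher order, so $\Phisans_{\delta}^{\lambda}$ is harmless. Third, Lemma~\ref{lem1-24-sept-2025} converts $|u^{\alpha\beta}|_p$ to $|u|_p$, using the smallness $|u|_{[N/2]+1}\leq 1/2$. Distributing derivatives via the high-order Leibniz rule and placing the top-order derivatives either on the $u^{\alpha\beta}$ factor or on the $\del\Phisans$ factor, while controlling all other factors in the low norm by the same smallness hypotheses, yields \eqref{eq6-24-sept-2025} directly.

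For \eqref{eq7-24-sept-2025} I would differentiate the identity once more, obtaining
\begin{equation*}
\del_{\gamma}\bigl(\Gamma^{\lambda} - {\Gamma_{\mathrm{r}}}^{\lambda}\bigr) = \del_{\gamma}u^{\alpha\beta}\,\del_{\alpha}\bigl(\Phisans_{\beta}^{\delta}\bigr)\Phisans_{\delta}^{\lambda} + u^{\alpha\beta}\,\del_{\gamma}\del_{\alpha}\bigl(\Phisans_{\beta}^{\delta}\bigr)\Phisans_{\delta}^{\lambda} + u^{\alpha\beta}\,\del_{\alpha}\bigl(\Phisans_{\beta}^{\delta}\bigr)\del_{\gamma}\bigl(\Phisans_{\delta}^{\lambda}\bigr),
\end{equation*}
and apply the same three ingredients at the differentiated level. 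The only new input is $|\Kscr^I\del\del\Phisans_{\beta}^{\delta}|\lesssim_N \Kref\epsm(t+r+1)^{-3+\theta}$ for $p\leq N$, which is again a direct consequence of Corollary~\ref{cor1-14-sept-2025} (applied to $\del\del\Xi$). The first term on the right-hand side produces $|\del u|_p$ with the weight $(t+r+1)^{-2+\theta}$, whereas the second and third terms produce $|u|_p$ with the sharper weight $(t+r+1)^{-3+\theta}$, which is exactly the structure of \eqref{eq7-24-sept-2025}.

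The only mildly delicate step is the bookkeeping in Lemma~\ref{lem1-24-sept-2025}, which we invoke to trade $u^{\alpha\beta}$ for $u_{\alpha\beta}$ at high order. That lemma already requires the smallness \eqref{eq1-23-sept-2025} on $|u|_{[N/2]+1}$, which is precisely what we have assumed; in particular the convergence of the Neumann-type series for $g^{-1}-\gref^{-1}$ is guaranteed. There is no additional combinatorial difficulty beyond what is handled there, and the proposed proof is essentially an algebraic identity followed by routine product estimates, so the main obstacle is organizational rather than analytical: one must keep track of the fact that every derivative falling on $\Phisans$ improves the decay by one power of $(t+r+1)$ while every derivative falling on $u$ increases the norm $|\cdot|_p$ by one order, and verify that the resulting combination matches the stated weights.
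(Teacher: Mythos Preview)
Your proposal is correct and follows essentially the same route as the paper: derive the identity $\Gamma^{\lambda}-{\Gamma_{\mathrm{r}}}^{\lambda}=u^{\alpha\beta}\del_{\alpha}(\Phisans_{\beta}^{\delta})\Phisans_{\delta}^{\lambda}$ by applying \eqref{eq4-21-sept-2025} to both metrics and invoking \eqref{eq5-24-sept-2025}, then combine \eqref{eq1-22-sept-2025} with Lemma~\ref{lem1-24-sept-2025}. One trivial slip: $\del_{\alpha}(\Phisans_{\beta}^{\delta})=\del_{\alpha}\Theta_{\beta}^{\delta}$ rather than $\del_{\alpha}\Xi_{\beta}^{\delta}$, but Corollary~\ref{cor1-14-sept-2025} covers both, so the argument is unaffected.
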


\begin{proof} 
\bse
Similar as in \eqref{eq4-21-sept-2025}, we have 
\begin{equation}
\aligned
g^{\alpha\beta}\Gamma_{\alpha\beta}^{\lambda} 
& = g^{\alpha\beta}\del_{\alpha}\big(\Phisans_{\beta}^{\delta}\big)\Phisans_{\delta}^{\lambda} + \gsans^{\mu\nu}\Gammasans_{\mu\nu}^{\delta}\Phisans_{\delta}^{\lambda}.
\endaligned
\end{equation}
Then again in view of \eqref{eq4-21-sept-2025} and \eqref{eq5-24-sept-2025}, we obtain 
\begin{equation}\label{eq2-21-sept-2025}
\Gamma^{\lambda} - {\Gamma_{\mathrm{r}}}^{\lambda} 
= \del_{\alpha}\big(\Phisans_{\beta}^{\delta}\big)\Phisans_{\delta}^{\lambda} u^{\alpha\beta}.
\end{equation}
Here, $u^{\alpha\beta} = g^{\alpha\beta} - \gref^{\alpha\beta}$. For the term $\del_{\alpha}\big(\Phisans_{\beta}^{\delta}\big)$ we apply \eqref{eq1-22-sept-2025} together with Lemma~\ref{lem1-24-sept-2025}, and obtain 
\be
|\Gamma^{\lambda}-{\Gamma_{\mathrm{r}}}^{\lambda}|_p
\lesssim_N \Kref\epsm(t+r+1)^{-2+\theta}\max_{\alpha,\beta}|u^{\alpha\beta}|_p
\lesssim_N \Kref\epsm(t+r+1)^{-2+\theta}|u|_p, 
\ee
which is \eqref{eq6-24-sept-2025}. In the same manner, we write 
\be
\aligned
\big|\del_{\gamma}\big(\Gamma^{\lambda} - {\Gamma_{\mathrm{r}}}^{\lambda}\big)\big|_p
& \lesssim_N 
\Kref\epsm(t+r+1)^{-2+\theta}\max_{\alpha,\beta}|\del_{\gamma}u^{\alpha\beta}|_p 
+ \Kref\epsm(t+r+1)^{-3+\theta}\max_{\alpha,\beta}|u^{\alpha\beta}|_p.
\endaligned
\ee
Then we apply again Lemma~\ref{lem1-24-sept-2025}, and reach \eqref{eq7-24-sept-2025}.
\ese
\end{proof}

%---------------------------------------------------

Then we have the wave coordinate condition for $g$ in the new coordinates.

\begin{corollary}\label{lem3-24-sept-2025}
Under the assumption of Lemma~\ref{lem1-24-sept-2025}, one has 
\begin{equation}
|\Gamma^{\lambda}|_{p,k}\lesssim_N K_{\theta }\epsm(t+r+1)^{-2+\theta}(1+|u|_{p,k}). 
\end{equation}
\end{corollary}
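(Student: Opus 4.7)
The plan is to decompose $\Gamma^\lambda = {\Gamma_{\mathrm{r}}}^{\lambda} + (\Gamma^{\lambda}- {\Gamma_{\mathrm{r}}}^{\lambda})$ and estimate the two pieces separately in the finer $|\cdot|_{p,k}$ norm. For the reference part, Proposition~\ref{lem1-26-sept-2025} already provides the bound
\[
|{\Gamma_{\mathrm{r}}}^{\lambda}|^{\Kscr}_{N+1}\lesssim_N \Kref\epsm(t+r+1)^{-2+\theta},
\]
and since the admissible family $\mathscr{Z}$ is a subfamily of the extended admissible family $\Kscr$, and since the type $(p,k)$ is preserved under this inclusion, this gives in particular $|{\Gamma_{\mathrm{r}}}^{\lambda}|_{p,k}\lesssim_N \Kref\epsm(t+r+1)^{-2+\theta}$ for $p\le N+1$. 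This is the $1$ in the $(1+|u|_{p,k})$ factor of the claim.

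For the difference $\Gamma^{\lambda}- {\Gamma_{\mathrm{r}}}^{\lambda}$, I would use the key identity \eqref{eq2-21-sept-2025} proven inside Lemma~\ref{lem2-24-sept-2025}, namely
\[
\Gamma^{\lambda} - {\Gamma_{\mathrm{r}}}^{\lambda} = \del_{\alpha}\big(\Phisans_{\beta}^{\delta}\big)\Phisans_{\delta}^{\lambda} \, u^{\alpha\beta}.
\]
Here one rewrites the proof of Lemma~\ref{lem2-24-sept-2025} while keeping track of the rank index. Applying an admissible operator $\mathscr{Z}^I$ of type $(p,k)$ and expanding by the Leibniz rule, one obtains a sum over admissible partitions $I_1\odot I_2\odot I_3 = I$ of terms of the shape $\mathscr{Z}^{I_1}\del_\alpha(\Phisans_\beta^\delta)\cdot \mathscr{Z}^{I_2}\Phisans_\delta^\lambda \cdot \mathscr{Z}^{I_3}u^{\alpha\beta}$, each admissible partition preserving the overall type $(p,k)$. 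Corollary~\ref{cor1-14-sept-2025} bounds $|\mathscr{Z}^{I_2}\Phisans_\delta^\lambda|$ and $|\mathscr{Z}^{I_1}\del_\alpha(\Phisans_\beta^\delta)|$ by $\Kref\epsm (t+r+1)^{-2+\theta}$ uniformly in the partition (using the smallness of $\Kref\epsm$ and the restriction $(N+3)\theta\le 1$ to absorb the $+(p-k)\theta$ losses into the overall exponent $\theta$, exactly as in the proof of Corollary~\ref{cor1-23-sept-2025}). For the factor $\mathscr{Z}^{I_3}u^{\alpha\beta}$, Lemma~\ref{lem1-24-sept-2025} together with its Leibniz expansion \eqref{eq7-23-sept-2025} gives $|u^{\alpha\beta}|_{p_3,k_3}\lesssim_N |u|_{p_3,k_3}\le |u|_{p,k}$.

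Summing over the finitely many admissible partitions and combining with the reference bound above yields
\[
|\Gamma^{\lambda}|_{p,k}\lesssim_N \Kref\epsm (t+r+1)^{-2+\theta}\bigl(1+|u|_{p,k}\bigr),
\]
which is the desired inequality (with $K_\theta = \Kref$). The only mild point to be careful about is the rank-tracking in the Leibniz expansion: one must observe that, in the argument of Lemma~\ref{lem1-24-sept-2025}, each summand in \eqref{eq7-23-sept-2025} naturally preserves the type $(p,k)$ of the outermost operator, so that the bound on $u^{\alpha\beta}$ involves the same rank $k$ as on the left-hand side, rather than the coarser bound $|u|_p$. This is the only step that requires revisiting; everything else is a direct combination of previously established estimates.
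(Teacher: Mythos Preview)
Your proof is correct and matches the paper's implicit approach: the corollary is simply the combination of Proposition~\ref{lem1-26-sept-2025} (for ${\Gamma_{\mathrm r}}^\lambda$) and Lemma~\ref{lem2-24-sept-2025} (for the difference), with the rank index tracked through the Leibniz expansion exactly as you describe. One minor imprecision: you write that Corollary~\ref{cor1-14-sept-2025} bounds $|\mathscr{Z}^{I_2}\Phisans_\delta^\lambda|$ by $\Kref\epsm(t+r+1)^{-2+\theta}$, but at order zero $\Phisans_\delta^\lambda = \delta_\delta^\lambda + \Theta_\delta^\lambda$ is only $O(1)$; the decay $(t+r+1)^{-2+\theta}$ comes entirely from the factor $\del_\alpha(\Phisans_\beta^\delta)$, while $\Phisans_\delta^\lambda$ contributes a bounded multiplicative constant. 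This does not affect the final estimate.
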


Then we turn our attention to the calculation of
\begin{equation}\label{eq3-21-sept-2025}
\aligned
&2\mathbb{G}(g,\del g,\Gamma,\del\Gamma)_{\alpha\beta} - 2\mathbb{G}(\gref,\del \gref,\Gamma_{\mathrm{r}},\del\Gamma_{\mathrm{r}})_{\alpha\beta}
\\
& =\del_{\alpha}\big(g_{\beta\lambda}\Gamma^{\lambda} - \gref_{\beta\lambda}{\Gamma_\mathrm{r}}^{\lambda}\big)
+\del_{\beta}\big(g_{\alpha\lambda}\Gamma^{\lambda} - \gref_{\alpha\lambda}{\Gamma_\mathrm{r}}^{\lambda}\big)
\\
& \quad +\Gamma^{\lambda}\del_{\lambda}g_{\alpha\beta} 
- {\Gamma_{\mathrm{r}}}^{\lambda}\del_{\lambda}\gref_{\alpha\beta}
+ g_{\alpha\lambda}g_{\beta\delta}\Gamma^{\lambda}\Gamma^{\delta} - \gref_{\alpha\lambda}\gref_{\beta\delta}{\Gamma_{\mathrm{r}}}^{\lambda}{\Gamma_{\mathrm{r}}}^{\delta}
\\
& =\big(\del_{\alpha}g_{\beta\lambda} + \del_{\beta}g_{\alpha\lambda}\big)\big(\Gamma^{\lambda} - {\Gamma_\mathrm{r}}^{\lambda}\big)
+(g_{\beta\lambda}\del_{\alpha} + g_{\alpha\lambda}\del_{\beta})\big(\Gamma^{\lambda} - {\Gamma_\mathrm{r}}^{\lambda}\big)
\\
& \quad +{\Gamma_{\mathrm{r}}}^{\lambda}\big(\del_{\alpha}u_{\beta\lambda} + \del_{\beta}u_{\alpha\lambda}\big)
+ \big(u_{\beta\lambda}\del_{\alpha} + u_{\alpha\lambda}\del_{\beta}\big){\Gamma_{\mathrm{r}}}^{\lambda}
\\
& \quad + \big(\Gamma^{\lambda} - {\Gamma_\mathrm{r}}^{\lambda}\big)\del_{\lambda}g_{\alpha\beta} 
+ {\Gamma_{\mathrm{r}}}^{\lambda}\del_{\lambda}u_{\alpha\beta}
\\
& \quad + u_{\alpha\lambda}g_{\beta\delta}\Gamma^{\lambda}\Gamma^{\delta}
+ \gref_{\alpha\lambda}u_{\beta\delta}\Gamma^{\lambda}\Gamma^{\delta}
+ \gref_{\alpha\lambda}\gref_{\beta\delta}\big(\Gamma^{\lambda} - {\Gamma_{\mathrm{r}}}^{\lambda}\big)
+ \gref_{\alpha\lambda}\gref_{\beta\delta}{\Gamma_{\mathrm{r}}}^{\lambda}\big(\Gamma^{\delta} - {\Gamma_{\mathrm{r}}}^{\delta}\big).
\endaligned
\end{equation}
For $\mathbb{G}$, we thus arrive at the following estimate.

\begin{proposition}\label{lem4-24-sept-2025}
Let $\gref$ be an $(N,\theta,\epss,\ell)$--admissible reference with $(N+3)\theta\leq 1$ and $\Kref\epsm$ sufficiently small. Assume, in addition, that $g$ is expressed in wave gauge with respect to $\gref$, and
\begin{equation}
|u|_{[N/2]+1}\leq 1/2,\quad \text{ in }  \Mcal_{[s_0,s_1]}.
\end{equation} 
Then one has 
\begin{equation}\label{eq1-04-oct-2025}
\aligned
& \big|\big(\mathbb{G}(g,\del g,\Gamma,\del\Gamma)_{\alpha\beta} - \mathbb{G}(\gref,\del \gref,\Gamma_{\mathrm{r}},\del\Gamma_{\mathrm{r}})_{\alpha\beta}\big)\big|_{p,k}
\\
& \lesssim \Kref \epsm(t+r+1)^{-2+\theta}(1+|t-r|)^{-1}|u|_{p,k} 
+ \Kref\epsm(t+r+1)^{-2+\theta}|\del u|_{p,k}
\\
& \quad + \Kref\epsm(t+r+1)^{-1+(N+1)\theta}(1+|r-t|)^{-1}\sum_{p_1+p_2\leq p\atop k_2+k_2\leq k}|\del u|_{p_1,k_1}|u|_{p_2,k_2}.
\endaligned
\end{equation}
\end{proposition}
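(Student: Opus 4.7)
The proof is a systematic expansion of the explicit algebraic decomposition \eqref{eq3-21-sept-2025}, followed by term-by-term estimation using the high-order product rule together with the sharp bounds already established. Concretely, my plan is to group the terms in \eqref{eq3-21-sept-2025} into three structurally different families and handle each in turn, invoking Lemma~\ref{lem2-24-sept-2025} (for $\Gamma^{\lambda}-{\Gamma_{\mathrm{r}}}^{\lambda}$ and its derivative), Corollary~\ref{lem3-24-sept-2025} (for $\Gamma^{\lambda}$ itself), Lemma~\ref{lem1-24-sept-2025} (for $u^{\alpha\beta}$), and Proposition~\ref{prop1-01-oct-2025} (for $\gref$, $\partial\gref$, and ${\Gamma_{\mathrm{r}}}^{\lambda}$).

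The first family consists of the terms in \eqref{eq3-21-sept-2025} that are linear in the Christoffel difference, namely $(\partial_{\alpha}g_{\beta\lambda}+\partial_{\beta}g_{\alpha\lambda})(\Gamma^{\lambda}-{\Gamma_{\mathrm{r}}}^{\lambda})$, $(g_{\beta\lambda}\partial_{\alpha}+g_{\alpha\lambda}\partial_{\beta})(\Gamma^{\lambda}-{\Gamma_{\mathrm{r}}}^{\lambda})$, and $(\Gamma^{\lambda}-{\Gamma_{\mathrm{r}}}^{\lambda})\partial_{\lambda}g_{\alpha\beta}$. I split $g=\gref+u$ and $\partial g=\partial\gref+\partial u$. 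The pieces in which only reference quantities accompany the Christoffel difference produce, after applying \eqref{eq6-24-sept-2025}--\eqref{eq7-24-sept-2025} and the weighted bounds \eqref{eq-Mink-conditions}, contributions of the first two types appearing on the right-hand side of \eqref{eq1-04-oct-2025}; the pieces in which $u$ or $\partial u$ accompanies the Christoffel difference are honestly quadratic and are deferred to the last summand of \eqref{eq1-04-oct-2025}.

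The second family consists of the terms in which the reference Christoffel symbol contracts with $\partial u$, and of those in which $u$ contracts with $\partial\Gamma_{\mathrm{r}}^{\lambda}$. For these, the product rule \eqref{eq1-17-july-2025} together with the pointwise bounds \eqref{eq-wave-condition} produces directly contributions of the first two types in \eqref{eq1-04-oct-2025}, with no quadratic remainder. The third family gathers the genuinely cubic contributions coming from $g_{\alpha\lambda}g_{\beta\delta}\Gamma^{\lambda}\Gamma^{\delta}-\gref_{\alpha\lambda}\gref_{\beta\delta}\Gamma_{\mathrm{r}}^{\lambda}\Gamma_{\mathrm{r}}^{\delta}$; expanding this difference into $u$-factors and Christoffel differences, then applying Corollary~\ref{lem3-24-sept-2025} and \eqref{eq6-24-sept-2025}, each such term carries at least two factors of $\Kref\epsm(t+r+1)^{-2+\theta}$ and is therefore absorbed into the quadratic summand under the smallness assumption on $\Kref\epsm$.

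The only mildly delicate point is verifying that every quadratic leftover produced in the first and third families can be dominated by the single quadratic weight $\Kref\epsm(t+r+1)^{-1+(N+1)\theta}(1+|r-t|)^{-1}$ appearing in \eqref{eq1-04-oct-2025}. The worst such coefficient, coming from the $\partial u\cdot(\Gamma-\Gamma_{\mathrm{r}})$ piece in the first family, is of size $\Kref\epsm(t+r+1)^{-2+\theta}$; this is controlled by the target weight precisely because the admissibility constraint $(N+3)\theta\leq 1$ guarantees $(t+r+1)^{-2+\theta}\leq (t+r+1)^{-1+(N+1)\theta}$, together with the trivial $1\leq (1+|r-t|)\cdot(1+|r-t|)^{-1}$. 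Beyond this bookkeeping, the remainder of the proof is a routine but tedious application of the high-order product rule to the eight terms in \eqref{eq3-21-sept-2025}; no new analytic obstacle appears.
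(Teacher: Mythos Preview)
Your proposal is correct and follows essentially the same approach as the paper. The paper's own proof is a one-sentence sketch: control each term in the decomposition \eqref{eq3-21-sept-2025} by splitting $g=\gref+u$ and applying Lemma~\ref{lem2-24-sept-2025}, the reference bounds \eqref{eqs2-23-sept-2025}, and the Christoffel estimate \eqref{eq4-22-sept-2025}, then simplify high-order $|u^2|_p$ terms via $|u|_{[N/2]}\leq 1/2$; your three-family organization is simply a more structured rendering of that same computation.

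One small remark: your justification of the quadratic absorption is slightly muddled. The inequality $(t+r+1)^{-2+\theta}\leq (t+r+1)^{-1+(N+1)\theta}(1+|r-t|)^{-1}$ follows from $(1+|r-t|)\leq (1+t+r)$ and $\theta\leq (N+1)\theta$, neither of which requires $(N+3)\theta\leq 1$; and your ``trivial $1\leq (1+|r-t|)\cdot(1+|r-t|)^{-1}$'' is vacuous. This does not affect correctness.
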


\begin{proof}
We need to control each term in the right-hand side of \eqref{eq3-21-sept-2025}. This is done by applying Lemma~\ref{lem2-24-sept-2025}, \eqref{eqs2-23-sept-2025}, and \eqref{eq4-22-sept-2025}. For the calculation, we recall the relation $g_{\alpha\beta} = u_{\alpha\beta} + \gref_{\alpha\beta}$. High-order terms such as $|u^2|_p$ are simplified as follows: 
$$
|u^2|_p\lesssim \sum_{p_1+p_2\leq p}|u|_{p_1}|u|_{p_2}\lesssim |u|_{[N/2]}|u|_p\lesssim |u|_p. \qedhere
$$
\end{proof}

%------------------------------------------------------------

\paragraph{Conclusion.}

In the new coordinate chart $\{x^{\alpha}\}$, the Einstein equations can be formulated as a quasi-linear wave system with the following structure. 

\begin{proposition}\label{prop1-27-sept-2025}
Let $\gref$ be an $(N,\theta,\epss,\ell)$--admissible reference with $(N+3)\theta\leq 1$ and $\Kref\epsm$ sufficiently small. Let $g$ be a solution to \eqref{eq4-13-sept-2025} satisfying the generalized wave coordinate condition
\begin{equation}
\Gammasans^{\lambda} = \widetilde{\Gamma_{\mathrm{r}}}^{\lambda}.
\end{equation}
Then $g$ is a solution to the following quasilinear wave equations in the new coordinates $\{x^{\alpha}\}$:
\begin{equation}\label{eq3-27-sept-2025}
\aligned
g^{\mu\nu}\del_{\mu}\del_{\nu}u_{\alpha\beta} 
& = \Fbb(g,g;\del u,\del u)_{\alpha\beta} 
+ u^{\mu\nu}\del_{\mu}\del_{\nu}\gref_{\alpha\beta} + 2\Fbb(g,g;\del\gref,\del u)_{\alpha\beta}
\\
& \quad + 2\Fbb(u,\gref;\del \gref,\del \gref)_{\alpha\beta} 
+ \Fbb(u,u;\del\gref,\del\gref)_{\alpha\beta}
\\
& \quad -2\big(\mathbb{G}(g,\del g,\Gamma,\del\Gamma)_{\alpha\beta} - \mathbb{G}(\gref,\del \gref,\Gamma_{\mathrm{r}},\del\Gamma_{\mathrm{r}})_{\alpha\beta}\big)
\\
& \quad + \mathrm{Tr}_g(T)g_{\alpha\beta} - 2T_{\alpha\beta}
\endaligned
\end{equation}
in $\RR^{1+3}_+ = \{(t,x)|t\geq 0,x\in\RR^3\}$, satisfying the generalized wave coordinate condition
\begin{equation}\label{eq9-24-sept-2025}
\Gamma^{\lambda}  
= {\Gamma_{\mathrm{r}}}^{\lambda} 
+ \del_{\alpha}\big(\Phisans_{\beta}^{\delta}\big)\Phisans_{\delta}^{\lambda} (g^{\alpha\beta} - \gref^{\alpha\beta}).
\end{equation}
\end{proposition}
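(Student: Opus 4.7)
\textbf{Proof proposal for Proposition~\ref{prop1-27-sept-2025}.}
The plan is to assemble the statement from identities that have already been established in Sections~\ref{subsec1-20-sept-2025} and the preceding subsection, so the proof is primarily a bookkeeping exercise. The starting point is the identity~\eqref{eq6-20-sept-2025}, which expresses the difference
$\Rbb^{\mathrm{w}}(g,\del g,\del\del g)_{\alpha\beta}-\Rbb^{\mathrm{w}}(\gref,\del\gref,\del\del\gref)_{\alpha\beta}$
in terms of the matter source $T_{\alpha\beta}$, the reference remainder $\Phisans_\alpha^\mu\Phisans_\beta^\nu\widetilde{U_{\mathrm{r}}}_{\mu\nu}$, and the commutator difference $\mathbb{G}(g)-\mathbb{G}(\gref)$. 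That identity in turn comes from applying the Einstein equations~\eqref{eq4-13-sept-2025} to $g$ in the new chart, from the almost Ricci--flat condition on $\gref$ (via the tensorial transformation law for the Ricci curvature), and from the definition \eqref{eq9-13-sept-2025} of $\Rbb^{\mathrm{w}}$.

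The first step would be to unfold the definition of $\Rbb^{\mathrm{w}}$ on both sides. The principal second-order part is $-\tfrac12 g^{\mu\nu}\del_\mu\del_\nu g_{\alpha\beta}$; writing $g_{\alpha\beta}=\gref_{\alpha\beta}+u_{\alpha\beta}$ separates this into $-\tfrac12 g^{\mu\nu}\del_\mu\del_\nu u_{\alpha\beta}$ and $-\tfrac12 g^{\mu\nu}\del_\mu\del_\nu \gref_{\alpha\beta}$, and then splitting $g^{\mu\nu}=\gref^{\mu\nu}+u^{\mu\nu}$ in the second piece produces the advertised term $-\tfrac12 u^{\mu\nu}\del_\mu\del_\nu \gref_{\alpha\beta}$ together with a genuine $\Rbb^{\mathrm{w}}(\gref)$ piece that cancels the corresponding term on the right-hand side of~\eqref{eq6-20-sept-2025}. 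The nonlinear part $\Fbb(g,g;\del g,\del g)-\Fbb(\gref,\gref;\del\gref,\del\gref)$ is then expanded by multilinearity, exactly as recorded in the displayed decomposition just after~\eqref{eq6-20-sept-2025}, producing the four $\Fbb$-terms that appear in~\eqref{eq3-27-sept-2025}. Multiplying the resulting identity by $-2$ transforms $-\tfrac12 g^{\mu\nu}\del_\mu\del_\nu u_{\alpha\beta}$ into $g^{\mu\nu}\del_\mu\del_\nu u_{\alpha\beta}$ and sends $T-\tfrac12\mathrm{Tr}_g(T)g$ to $\mathrm{Tr}_g(T)g-2T$; the factor $-2$ in front of $\mathbb{G}(g)-\mathbb{G}(\gref)$ also comes out correctly.

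The second step is to derive \eqref{eq9-24-sept-2025}. For this, I would apply the identity~\eqref{eq4-21-sept-2025} once to $\gref$ and once to $g$, and subtract:
\[
\Gamma^{\lambda}-{\Gamma_{\mathrm{r}}}^{\lambda}
=\bigl(\gsans^{\mu\nu}-\greft^{\mu\nu}\bigr)\widetilde{\Gamma}[\gref]_{\mu\nu}^{\delta}\,\Phisans_{\delta}^{\lambda}
+\del_{\alpha}\!\bigl(\Phisans_{\beta}^{\delta}\bigr)\Phisans_{\delta}^{\lambda}\bigl(g^{\alpha\beta}-\gref^{\alpha\beta}\bigr)
+\bigl(\Gammasans^{\lambda}-\widetilde{\Gamma_{\mathrm{r}}}^{\lambda}\bigr)\Phisans_{\delta}^{\lambda},
\]
and then invoke the hypothesis $\Gammasans^{\lambda}=\widetilde{\Gamma_{\mathrm{r}}}^{\lambda}$ together with the fact that $\gsans_{\alpha\beta}-\greft_{\alpha\beta}$ has the same transformation law as $g_{\alpha\beta}-\gref_{\alpha\beta}$ under the chart change, so that the first term on the right is of the same form as the second; this is precisely the content of the auxiliary identity~\eqref{eq2-21-sept-2025} already proven in Lemma~\ref{lem2-24-sept-2025}. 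This gives~\eqref{eq9-24-sept-2025} directly.

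The one delicate point, which I expect to be the main obstacle in a clean write-up, is what to do with the reference-curvature remainder $\Phisans_{\alpha}^{\mu}\Phisans_{\beta}^{\nu}\widetilde{U_{\mathrm{r}}}_{\mu\nu}$ coming from~\eqref{eq5-20-sept-2025} and \eqref{eq6-20-sept-2025}: it does not appear explicitly among the displayed terms in~\eqref{eq3-27-sept-2025}. This contribution is to be absorbed into the right-hand side as an \emph{implicit source} that is controlled by~\eqref{eq-Ricci-condition} (obtained via~\eqref{eq8-13-sept-2025} and the coordinate change), and is of the same schematic form as, and with even better decay than, the cubic terms that will be handled in the bootstrap. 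A careful proof statement therefore requires either writing~\eqref{eq3-27-sept-2025} modulo such a remainder $\mathbb{U}_{\mathrm{r}}[\gref]_{\alpha\beta}$ of size $\epss^2(1+t+r)^{-4+\theta}$, or stating the identity as an exact equation and absorbing $\mathbb{U}_{\mathrm{r}}$ into the $\Fbb(u,\gref;\del\gref,\del\gref)$ channel via the structural bound~\eqref{eq-Mink-conditions}. Once this convention is fixed, the remainder of the proof is a direct substitution and regrouping as outlined above.
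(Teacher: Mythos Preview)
Your approach is essentially the paper's own: unfold $\Rbb^{\mathrm{w}}$, expand the $\Fbb$-difference by multilinearity (exactly the displayed decomposition before Lemma~\ref{lem1-24-sept-2025}), multiply by $-2$, and quote~\eqref{eq2-21-sept-2025} from Lemma~\ref{lem2-24-sept-2025} for the gauge condition~\eqref{eq9-24-sept-2025}. Two small corrections and one remark.

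First, your displayed subtraction formula in the second step is not right. Applying the transformation identity~\eqref{eq4-21-sept-2025} to $g$ gives $\Gamma^{\lambda}=g^{\alpha\beta}\del_{\alpha}(\Phisans_{\beta}^{\delta})\Phisans_{\delta}^{\lambda}+\Gammasans^{\delta}\Phisans_{\delta}^{\lambda}$, with the \emph{full} old-chart contracted symbol $\Gammasans^{\delta}$ of $g$, not $\widetilde{\Gamma}[\gref]^{\delta}_{\mu\nu}$ contracted against $\gsans^{\mu\nu}$. Subtracting the $\gref$-version therefore produces only two terms,
\[
\Gamma^{\lambda}-{\Gamma_{\mathrm{r}}}^{\lambda}
=(g^{\alpha\beta}-\gref^{\alpha\beta})\,\del_{\alpha}(\Phisans_{\beta}^{\delta})\Phisans_{\delta}^{\lambda}
+(\Gammasans^{\delta}-\widetilde{\Gamma_{\mathrm{r}}}^{\delta})\Phisans_{\delta}^{\lambda},
\]
and the hypothesis $\Gammasans^{\lambda}=\widetilde{\Gamma_{\mathrm{r}}}^{\lambda}$ kills the second one immediately; your spurious first term $(\gsans^{\mu\nu}-\greft^{\mu\nu})\widetilde{\Gamma}[\gref]_{\mu\nu}^{\delta}\Phisans_{\delta}^{\lambda}$ never appears, and the index in your last term should be $\delta$, not $\lambda$. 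Since you already point to~\eqref{eq2-21-sept-2025} as the clean statement, this is cosmetic, but the written intermediate step is misleading.

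Second, your observation about the reference-curvature remainder $\Phisans_{\alpha}^{\mu}\Phisans_{\beta}^{\nu}\widetilde{U_{\mathrm{r}}}_{\mu\nu}$ is correct and well spotted: it is present in~\eqref{eq6-20-sept-2025} and absent from~\eqref{eq3-27-sept-2025}. The paper silently drops it, relying on~\eqref{eq-Ricci-condition} (decay $\epss^{2}(1+t+r)^{-4+\theta}$, which is strictly better than every displayed source). For a clean write-up, add it explicitly as an extra source term and note that it is trivially bounded in both the $L^{2}$ and pointwise senses used later; there is no need to absorb it into the $\Fbb$-channels.
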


We conclude this section with two observations. 

\bei 

\item 
The coordinate condition \eqref{eq9-24-sept-2025} is preserved once it holds on an initial slice; they contain known functions $\Phisans_{\beta}^{\alpha},\gref^{\alpha\beta},\Gamma_{\mathrm{r}}^{\lambda}$ and the metric $g^{\alpha\beta}$, but not the derivatives of the metric $g$ (cf.~also~\cite[Appendix B.0.1]{KauffmanLindblad}).

\item In the right-hand side of \eqref{eq3-27-sept-2025}, the terms  $\Fbb(g,g;\del u,\del u)_{\alpha\beta}$, $u^{\mu\nu}\del_{\mu}\del_{\nu}\gref_{\alpha\beta}$, and $\Fbb(g,g;\del\gref,\del u)_{\alpha\beta}$ have been treated in our previous work \cite{PLF-YM-PDE}. The terms $\mathrm{Tr}_g(T)g_{\alpha\beta} - 2T_{\alpha\beta}$ are interaction terms due to the coupling source, which in the present setup is written as \eqref{equa-81} and will be analyzed in full details in the next section. The new terms $\big(\mathbb{G}(g,\del g,\Gamma,\del\Gamma)_{\alpha\beta} - \mathbb{G}(\gref,\del \gref,\Gamma_{\mathrm{r}},\del\Gamma_{\mathrm{r}})_{\alpha\beta}\big)$ are due to the coordinate transform, and were dealt with in Proposition~\ref{lem4-24-sept-2025}. We are thus ready to run the existence bootstrap mechanism for Einstein equations.

\eei

}

%=================================================================================== 

\section{Einstein-Dirac system in generalized wave coordinates}
\label{section=N14}

\subsection{ Derivation of the massive Dirac equation} 

{ 

The Vielbein energy-momentum tensor defined, for instance, in \cite[equation (3.274)]{Ortin} 
is derived through the variation of the action associated with the spinor fields (which are coupled to an external gravitational field) with respect to the Vielbein or frame fields. It takes the form 
\begin{equation}
\label{equa-81}
T[\Psi]_{\mu\nu} 
= 
\frac{\mathrm{i}}{4} \Big(\la\Psi,\del_\mu \cdot\nabla_\nu \Psi \ra_{\ourD}
+ 
\la\Psi,\del_{\nu} \cdot\nabla_{\mu} \Psi \ra_{\ourD} \Big)
- 
\frac{\mathrm{i}}{4} \Big(\la \del_\mu \cdot\nabla_\nu \Psi,\Psi\ra_{\ourD}
+
\la \del_{\nu} \cdot\nabla_{\mu} \Psi,\Psi\ra_{\ourD}
\Big).
\end{equation}
Thus the Einstein equations coupled with a spinor field read
\begin{equation}
G_{\mu\nu} = T_{\mu\nu}.
\end{equation}
It might seem counter-intuitive that the mass parameter of the spinor field does not appear in \eqref{equa-81}. 

\begin{proposition}[Compatibility property for the Einstein-Dirac system]
Consider sufficiently regular spinor fields $\Psi$ defined in an open set of a spacetime $(\mathcal{M},g)$ and satisfying the massive Dirac equation
\begin{equation} \label{eq1-09-march-2025}
\opDirac \Psi + \mathrm{i}M\Psi = 0
\end{equation}
for some \textbf{mass parameter} $M$, taken here to be an \emph{arbitrary} real-valued field. Then, one has the divergence-law
\begin{equation} \label{eq3-09-march-2025}
\nabla^\mu \bigl( T[\Psi]_{\mu\nu} \bigr) = 0.
\end{equation}
\end{proposition}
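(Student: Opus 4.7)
The plan is to distribute the covariant divergence through each of the four bilinear contributions in~\eqref{equa-81} and to reorganize them using three tools from the excerpt: (i) compatibility of $\nabla$ with the Dirac scalar product, (ii) Lemma~\ref{lem1-24-feb-2025} on the covariant constancy of Clifford multiplication by coordinate vectors, and (iii) the Dirac equation $\opDirac\Psi=-\mathrm{i}M\Psi$ together with its adjoint. Writing $A_{\mu\nu}:=\la\Psi,\del_\mu\cdot\nabla_\nu\Psi\ra_{\ourD}$ and $A^{\dag}_{\mu\nu}:=\la\del_\mu\cdot\nabla_\nu\Psi,\Psi\ra_{\ourD}$, I would organize $T_{\mu\nu}=\tfrac{\mathrm{i}}{4}(A_{\mu\nu}+A_{\nu\mu}-A^{\dag}_{\mu\nu}-A^{\dag}_{\nu\mu})$ and treat $\nabla^\mu T_{\mu\nu}$ as a tensorial divergence. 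Using $\nabla g=0$ and commuting $\nabla_\alpha$ past $\del_\mu\cdot$ via Lemma~\ref{lem1-24-feb-2025}, the Christoffel corrections inherent in the tensorial divergence will combine with the derivative of $A_{\mu\nu}$ to produce precisely the Dirac operator applied to $\nabla_\nu\Psi$.

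Concretely, I expect to arrive at the schematic identity
$$
g^{\mu\alpha}\nabla_\alpha A_{\mu\nu}
= \la\opDirac\Psi,\nabla_\nu\Psi\ra_{\ourD} + \la\Psi,\opDirac(\nabla_\nu\Psi)\ra_{\ourD} + (\text{Christoffel remainders}),
$$
with the Christoffel remainders absorbed by the tensorial divergence. For the first term one substitutes $\opDirac\Psi=-\mathrm{i}M\Psi$ directly. For the second, one invokes the commutator identity~\eqref{eq2-22-feb-2025} from the excerpt, which gives $\opDirac(\nabla_\nu\Psi)=\nabla_\nu(\opDirac\Psi)+g^{\alpha\beta}\del_\alpha\cdot\nabla_{\nabla_\beta\del_\nu}\Psi+\tfrac{1}{2}g^{\alpha\beta}\Ric(\del_\nu,\del_\beta)\del_\alpha\cdot\Psi$; here $\nabla_\nu(\opDirac\Psi)=-\mathrm{i}M\nabla_\nu\Psi$ again by the Dirac equation, while the $\Gamma$-type term is expected to cancel the remaining Christoffel corrections from the tensorial divergence.

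The final cancellation is then structural. The $M$-dependent pieces collected from the four brackets of $T_{\mu\nu}$ sum, after symmetrization in $(\mu,\nu)$ and Hermitian antisymmetrization, to a combination proportional to $-\mathrm{i}M\,\nabla_\nu\la\Psi,\Psi\ra_{\ourD}$ and its complex conjugate; since $\la\Psi,\Psi\ra_{\ourD}$ is real, this combination vanishes identically under the $\mathrm{i}/4$ prefactor, independently of the value of $M$ (which is consistent with the fact that $M$ does not appear in the formula~\eqref{equa-81}). The surviving Ricci contribution of the form $\tfrac{1}{2}g^{\alpha\beta}\Ric(\del_\nu,\del_\beta)\la\Psi,\del_\alpha\cdot\Psi\ra_{\ourD}$ is real-valued because $\la\Psi,X\cdot\Psi\ra_{\ourD}\in\RR$ for any real vector $X$ (by anti-Hermiticity of Clifford multiplication combined with the Dirac form); multiplied by $\mathrm{i}/4$ it yields a purely imaginary contribution, which is killed exactly by the Hermitian-antisymmetric structure (the minus signs in front of $A^{\dag}_{\mu\nu}+A^{\dag}_{\nu\mu}$) in the definition of $T_{\mu\nu}$.

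The main obstacle will be the careful bookkeeping of two independent families of corrections that coexist in the calculation: those arising from the tensorial nature of $T_{\mu\nu}$ (Christoffel symbols from the divergence acting on a $(0,2)$-tensor) and those arising from the spinorial commutator $[\opDirac,\nabla_\nu]\Psi$ (Christoffel plus Ricci pieces, via Lemma~\ref{lemma-51}). These families must be matched with the right signs so that the Christoffel corrections cancel pairwise while only real-valued bilinears involving the Ricci tensor survive; these residual real bilinears are then annihilated by the global $\mathrm{i}/4$ prefactor combined with the Hermitian antisymmetrization inherent in~\eqref{equa-81}. The fact that the Bianchi-type identity of Lemma~\ref{lemma-51} forces the Ricci contraction arising from $[\opDirac,\nabla_\nu]$ to appear in precisely the real-valued form $\la\Psi,\del_\alpha\cdot\Psi\ra_{\ourD}$ is the decisive algebraic input that makes the whole identity~\eqref{eq3-09-march-2025} close.
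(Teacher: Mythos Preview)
Your sketch handles the bracket $A_{\mu\nu}=\la\Psi,\del_\mu\cdot\nabla_\nu\Psi\ra_{\ourD}$ correctly, but it overlooks that the symmetrized bracket $A_{\nu\mu}=\la\Psi,\del_\nu\cdot\nabla_\mu\Psi\ra_{\ourD}$ behaves in a structurally different way under the divergence $\nabla^\mu$. For $A_{\nu\mu}$ the contracted index sits on $\nabla_\mu$ rather than on $\del_\mu\cdot$, so the Leibniz expansion of $\nabla^\mu A_{\nu\mu}$ produces $\la\Psi,\del_\nu\cdot\Box_g\Psi\ra_{\ourD}$ (together with a cross term $\la\nabla^\mu\Psi,\del_\nu\cdot\nabla_\mu\Psi\ra_{\ourD}$, which the paper shows cancels against its counterpart in $A^{\dag}_{\nu\mu}$), and \emph{not} another copy of $\la\Psi,\opDirac(\nabla_\nu\Psi)\ra_{\ourD}$. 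Your commutator identity~\eqref{eq2-22-feb-2025} is therefore of no help for this bracket. The paper handles the $\Box_g\Psi$ contribution by invoking the Lichnerowicz-type formula of Lemma~\ref{lem1-21-feb-2025}, namely $\Box_g\Psi=-\opDirac^2\Psi-\tfrac{1}{4}R\Psi$, which converts the wave operator back into iterated Dirac operators and allows a second insertion of $\opDirac\Psi=-\mathrm{i}M\Psi$.

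This omission is not cosmetic. If you carry out only the $A_{\mu\nu}$ and $A^{\dag}_{\mu\nu}$ analysis, the constant-$M$ pieces do cancel between your ``first term'' and ``second term'', but a residue $+\tfrac{1}{2}(\del_\nu M)\la\Psi,\Psi\ra_{\ourD}$ survives. It is precisely the $\Box_g$-bracket that supplies the compensating $-\tfrac{1}{2}(\del_\nu M)\la\Psi,\Psi\ra_{\ourD}$. Since the proposition explicitly allows $M$ to be an arbitrary real-valued field, your mass-cancellation paragraph --- which asserts that the $M$-terms collapse to a real quantity annihilated by Hermitian antisymmetry --- does not close without this input. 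The remainder of your outline (Christoffel bookkeeping via Lemma~\ref{lem1-24-feb-2025}, realness of $\la\Psi,\del_\alpha\cdot\Psi\ra_{\ourD}$ for the Ricci piece) is correct; you are simply missing the $A_{\nu\mu}$ computation and Lemma~\ref{lem1-21-feb-2025}.
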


{\sl For instance,} rather than a constant $M$, we could pick a mass depending upon the spacetime scalar curvature or the norm $\la \Psi,\Psi\ra_{\ourD}$ of te spinor field. 
\be
M = 1+R, \qquad M = 1 + R_{\alpha\beta}R^{\alpha\beta}, \qquad M = 1 + \la \Psi,\Psi\ra_{\ourD}. 
\ee
In the present Monograph, we restrict attention to a \emph{constant and positive mass} $M>0$. 

\begin{proof} 
\bse
By a direct calculation we obtain 
\be
\aligned
\nabla^\mu T[\Psi]_{\mu\nu} 
& =  \frac{\mathrm{i}}{4} \Big(\la \del_\mu \nabla^\mu  \Psi,\nabla_\nu \Psi \ra_{\ourD}
+
\la\Psi,\del_\mu \cdot\nabla^\mu  \nabla_\nu \Psi \ra_{\ourD} 
\Big)
\\
& \quad + \frac{\mathrm{i}}{4} \Big(\underline{\la\nabla^\mu  \Psi,\del_{\nu} \cdot\nabla_{\mu} \Psi  \ra_{\ourD}}
+ \la \Psi,\del_{\nu} \cdot\nabla^\mu  \nabla_{\mu} \Psi\ra_{\ourD} \Big)
\\
& - \frac{\mathrm{i}}{4} \Big(\la\del_\mu \cdot\nabla^\mu  \nabla_\nu \Psi,\Psi \ra_{\ourD}
+\la\del_\mu \cdot\nabla_\nu \Psi,\nabla^\mu  \Psi \ra_{\ourD}
\Big)
\\
& - \frac{\mathrm{i}}{4} \Big(
\la \del_{\nu} \cdot\nabla^\mu  \nabla_{\mu} \Psi,\Psi\ra_{\ourD}
+
\underline{\la \del_{\nu} \cdot\nabla_{\mu} \Psi,\nabla^\mu  \Psi\ra_{\ourD}}
\Big).
\endaligned
\ee
By recalling the notation $\opDirac \Psi = \del^{\mu} \cdot\nabla_{\mu} \Psi$ and $\Box_g\Psi := \nabla^\mu  \nabla_{\mu} \Psi$, we find  
\begin{equation} \label{eq2-09-march-2025}
\aligned
\nabla^\mu T[\Psi]_{\mu\nu}
& = \frac{\mathrm{i}}{4} \Big(\la \opDirac\Psi,\nabla_\nu \Psi\ra_{\ourD}
-
\la \nabla_\nu \Psi,\opDirac\Psi\ra_{\ourD}
\Big) 
+ \frac{\mathrm{i}}{4} \Big(\la \del_{\nu} \cdot\Psi,\Box_g\Psi \ra_{\ourD}
- \la \Box_g\Psi,\del_{\nu} \cdot\Psi \ra_{\ourD}
\Big)
\\
& \quad + \frac{\mathrm{i}}{4} \Big(\la\Psi,\del_\mu \cdot\nabla^\mu  \nabla_\nu \Psi \ra_{\ourD} 
-
\la\del_\mu \cdot\nabla^\mu  \nabla_\nu \Psi,\Psi \ra_{\ourD}
\Big). 
\endaligned
\end{equation}
We focus on the last term and write 
\be
\aligned
\del_\mu \cdot\nabla^\mu  \nabla_\nu \Psi 
& =  g^{\mu\alpha} \del_\alpha \cdot\nabla_{\mu} \nabla_\nu \Psi
= g^{\mu\alpha} \del_\alpha \cdot\nabla_\nu \nabla_{\mu} \Psi + g^{\mu\alpha} \del_\alpha \cdot \opCurve_{\mu\nu} \Psi
\\
& = \nabla_\nu \big(g^{\mu\alpha} \del_\alpha \cdot\nabla_{\mu} \Psi\big) 
+ \frac{1}{2}g^{\mu\alpha}R_{\mu\nu} \del_\alpha \cdot\Psi 
= \nabla_{\nu}(\opDirac \Psi) + \frac{1}{2}g^{\mu\alpha}R_{\mu\nu} \del_\alpha \cdot\Psi, 
\endaligned
\ee
where we applied \eqref{eq2-24-feb-2025}. Consequently, we obtain 
\be
\aligned
& \frac{\mathrm{i}}{4} \Big(\la\Psi,\del_\mu \cdot\nabla^\mu  \nabla_\nu \Psi \ra_{\ourD} 
- \la\del_\mu \cdot\nabla^\mu  \nabla_\nu \Psi,\Psi \ra_{\ourD}
\Big)
=  
\frac{\mathrm{i}}{4} \big( \la \Psi,\nabla_{\nu}(\opDirac\Psi)\ra_{\ourD} 
- \la \nabla_{\nu}(\opDirac\Psi),\Psi\ra_{\ourD} \big)
\\
& =  \frac{\mathrm{i}}{4} \big( \la \Psi,\nabla_{\nu}(\opDirac\Psi + \mathrm{i}M\Psi)\ra_{\ourD} 
- \la \nabla_{\nu}(\opDirac\Psi + \mathrm{i}M\Psi),\Psi\ra_{\ourD} \big)
+ \frac{1}{4} \big(\la \Psi,\nabla_{\nu}(M\Psi)\ra_{\ourD} + \la \nabla_{\nu}(M\Psi),\Psi\ra_{\ourD} \big)
\\
& = \frac{\mathrm{i}}{4} \big( \la \Psi,\nabla_{\nu}(\opDirac\Psi + \mathrm{i}M\Psi)\ra_{\ourD} 
- \la \nabla_{\nu}(\opDirac\Psi + \mathrm{i}M\Psi),\Psi\ra_{\ourD} \big)
\\
& \quad +  \frac{M}{4} \big(\la \Psi,\nabla_\nu \Psi\ra_{\ourD} 
+ \la \nabla_\nu \Psi,\Psi\ra_{\ourD} \big)
+\frac{\del_{\nu}M}{2} \la \Psi,\Psi\ra_{\ourD}.
\endaligned
\ee
\ese
For the second term in the right-hand side of \eqref{eq2-09-march-2025}, we apply Lemma~\ref{lem1-21-feb-2025} and obtain
\bse
\be
\aligned
& \frac{\mathrm{i}}{4} \Big(\la \del_{\nu} \cdot\Psi,\Box_g\Psi \ra_{\ourD}
- \la \Box_g\Psi,\del_{\nu} \cdot\Psi \ra_{\ourD}
\Big) 
 =  \frac{\mathrm{i}}{4} \Big(\la\del_{\nu} \cdot\opDirac^2\Psi,\Psi\ra_{\ourD} - \la\Psi,\del_{\nu} \cdot\opDirac^2\Psi\ra_{\ourD} 
\Big)
\\
& = \frac{\mathrm{i}}{4} \Big(\la \del_{\nu} \cdot\opDirac(\opDirac\Psi + \mathrm{i}M\Psi),\Psi\ra_{\ourD}
-
\la\Psi,\del_{\nu} \cdot\opDirac(\opDirac\Psi + \mathrm{i}M\Psi)\ra_{\ourD}
\Big)
\\
& - \frac{1}{4} \Big(
\la\del_{\nu} \cdot\opDirac(M\Psi),\Psi \ra_{\ourD} + \la \Psi,\del_{\nu} \cdot\opDirac(M\Psi)\ra_{\ourD}
\Big)
\\
& = \frac{\mathrm{i}}{4} \Big(\la \del_{\nu} \cdot\opDirac(\opDirac\Psi + \mathrm{i}M\Psi),\Psi\ra_{\ourD}
-
\la\Psi,\del_{\nu} \cdot\opDirac(\opDirac\Psi + \mathrm{i}M\Psi)\ra_{\ourD}
\Big)
\\
& - \frac{M}{4} \Big(
\la\del_{\nu} \cdot\opDirac\Psi,\Psi \ra_{\ourD} + \la \Psi,\del_{\nu} \cdot\opDirac\Psi\ra_{\ourD}
\Big) 
- \frac{\del_{\nu}M}{2} \la \Psi,\Psi\ra_{\ourD}.
\endaligned
\ee
Here,  for the last equality, we used
\be
\aligned
& \la \del_{\nu} \cdot g^{\alpha\beta} \del_{\beta}M\del_\alpha \cdot\Psi,\Psi \ra_{\ourD}
+\la \Psi,\del_{\nu} \cdot g^{\alpha\beta} \del_{\beta}M\del_\alpha \cdot\Psi\ra_{\ourD}
\\
& =  g^{\alpha\beta} \del_{\beta}M
\big(\la\del_{\nu} \cdot\del_\alpha \Psi,\Psi\ra_{\ourD} 
+ \la\Psi,\del_{\nu} \cdot\del_\alpha \Psi\ra_{\ourD} \big)
= g^{\alpha\beta} \del_{\beta}M\la \Psi,(\del_\alpha \cdot\del_{\nu}+\del_{\nu} \cdot\del_{\alpha})\cdot\Psi\ra_{\ourD}
\\
& = g^{\alpha\beta}g_{\alpha\nu} \del_{\beta}M\la \Psi,\Psi\ra_{\ourD}
= 2\del_{\nu}M \la \Psi,\Psi\ra_{\ourD}.
\endaligned
\ee
For the second term in the right-hand side of \eqref{eq2-09-march-2025}, we thus find 
\be
\aligned
& \frac{\mathrm{i}}{4} \Big(\la \del_{\nu} \cdot\Psi,\Box_g\Psi \ra_{\ourD}
- \la \Box_g\Psi,\del_{\nu} \cdot\Psi \ra_{\ourD}
\Big) 
\\
& = \frac{\mathrm{i}}{4} \Big(\la \del_{\nu} \cdot\opDirac(\opDirac\Psi + \mathrm{i}M\Psi),\Psi\ra_{\ourD}
- \la\Psi,\del_{\nu} \cdot\opDirac(\opDirac\Psi + \mathrm{i}M\Psi)\ra_{\ourD}
\Big)
\\
& \quad - \frac{M}{4} \Big(
\la\del_{\nu} \cdot(\opDirac\Psi + \mathrm{i}M\Psi),\Psi \ra_{\ourD} 
+ \la \Psi,\del_{\nu} \cdot(\opDirac\Psi+\mathrm{i}M\Psi)\ra_{\ourD}
\Big) 
- \frac{\del_{\nu}M}{2} \la \Psi,\Psi\ra_{\ourD}.
\endaligned
\ee
For the first term in the right-hand side of \eqref{eq2-09-march-2025}, we obtain 
\be
\aligned
\frac{\mathrm{i}}{4} \Big(\la \opDirac\Psi,\nabla_\nu \Psi\ra_{\ourD}
-
\la \nabla_\nu \Psi,\opDirac\Psi\ra_{\ourD}
\Big) 
& =  \frac{\mathrm{i}}{4} \Big(\la \opDirac\Psi + \mathrm{i}M\Psi,\nabla_\nu \Psi \ra_{\ourD}
- 
\la \nabla_\nu \Psi,\opDirac\Psi + \mathrm{i}M\Psi\ra_{\ourD} \Big)
\\
& \quad - \frac{M}{4} \big(\la \Psi,\nabla_\nu \Psi\ra_{\ourD} + \la \nabla_\nu \Psi,\Psi\ra_{\ourD} \big). 
\endaligned
\ee
We thus arrive at 
\begin{equation}
\aligned
\nabla^\mu T[\Psi]_{\mu\nu} 
& = 
\frac{\mathrm{i}}{4} \Big(\la \opDirac\Psi + \mathrm{i}M\Psi,\nabla_\nu \Psi \ra_{\ourD}
- 
\la \nabla_\nu \Psi,\opDirac\Psi + \mathrm{i}M\Psi\ra_{\ourD} \Big)
\\
& \quad + \frac{\mathrm{i}}{4} \Big(\la \del_{\nu} \cdot\opDirac(\opDirac\Psi + \mathrm{i}M\Psi),\Psi\ra_{\ourD}
-
\la\Psi,\del_{\nu} \cdot\opDirac(\opDirac\Psi + \mathrm{i}M\Psi)\ra_{\ourD}
\Big)
\\
& \quad - \frac{M}{4} \Big(
\la\del_{\nu} \cdot(\opDirac\Psi + \mathrm{i}M\Psi),\Psi \ra_{\ourD} 
+ \la \Psi,\del_{\nu} \cdot(\opDirac\Psi+\mathrm{i}M\Psi)\ra_{\ourD}
\Big) 
\\
& \quad + \frac{\mathrm{i}}{4} \big( \la \Psi,\nabla_{\nu}(\opDirac\Psi + \mathrm{i}M\Psi)\ra_{\ourD} 
- \la \nabla_{\nu}(\opDirac\Psi + \mathrm{i}M\Psi),\Psi\ra_{\ourD} \big). 
\endaligned
\end{equation}
This establishes \eqref{eq3-09-march-2025}.
\ese
\end{proof}

}

%-----------------------------------------------------------------------------------------------------------------------------------------------

\subsection{ Formulation of the Einstein-Dirac system in PDE form}
\label{subsec1-22-oct-2025}
{ 

Recalling that we always apply the modified derivatives $\widehat{\del}_{\alpha}$ in the analysis for spinor fields, we establish the following identity. Observe that the last two terms in \eqref{eq6-27-sept-2025} are cubic in nature. 

\begin{lemma}\label{lem1-01-oct-2025}
Let $\Psi$ be a sufficiently regular spinor field and 
\be
\widehat{\del}_{\alpha}\Psi =: \nabla_{\alpha}\Psi 
+ \frac{1}{4}g^{\mu\nu}\del_{\mu}\cdot\nabla_{\nu}\del_{\alpha}\cdot\Psi.
\ee
Assume furthermore that \eqref{eq1-09-march-2025} holds. Then one has
\begin{equation}\label{eq6-27-sept-2025}
\aligned
&T_{\alpha\beta} - \frac{1}{2}\mathrm{Tr}_g(T)g_{\alpha\beta}
\\
& =\frac{\mathrm{i}}{4} \Big(\la\Psi,\del_\mu \cdot\widehat{\del_{\nu}} \Psi \ra_{\ourD}
+ 
\la\Psi,\del_{\nu} \cdot\widehat{\del_{\mu}} \Psi \ra_{\ourD} \Big)
- 
\frac{\mathrm{i}}{4} \Big(\la \del_\mu \cdot\widehat{\del_\nu} \Psi,\Psi\ra_{\ourD}
+
\la \del_{\nu} \cdot\widehat{\del_{\mu}} \Psi,\Psi\ra_{\ourD}
\Big)
\\
& \quad - \frac{M}{2}g_{\alpha\beta}\la\Psi,\Psi\ra_{\ourD}
+\frac{\mathrm{i}}{4}g^{\gamma\beta}\del_{\beta}g_{\mu\nu}\la \Psi,\del_{\gamma}\cdot\Psi\ra_{\ourD} 
- \frac{\mathrm{i}}{8}g^{\gamma\beta}(\del_{\mu}g_{\nu\beta} + \del_{\nu}g_{\mu\beta})\la \Psi,\del_{\gamma}\cdot\Psi\ra_{\ourD}.
\endaligned
\end{equation}
\end{lemma}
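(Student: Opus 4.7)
The plan is to substitute the defining relation of the modified derivative
\be
\nabla_\nu\Psi \;=\; \widehat{\del_\nu}\Psi \;+\; \tfrac{1}{4}\,g^{\alpha\gamma}\,\del_\alpha\cdot\nabla_\gamma\del_\nu\cdot\Psi
\ee
(which is just \eqref{equa-511m} rewritten) into the definition \eqref{equa-tensorTmunu-0} of $T[\Psi]_{\mu\nu}$, then handle the trace separately via the Dirac equation, and finally simplify the cubic correction using the Clifford anticommutation. The first batch of four Dirac-form terms in the claimed identity appears immediately from substituting the $\widehat{\del_\nu}\Psi$ piece into \eqref{equa-tensorTmunu-0}; all the rest of the argument is devoted to tracking the residual cubic term and the trace.

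To produce the mass contribution $-\tfrac{M}{2}g_{\alpha\beta}\la\Psi,\Psi\ra_{\ourD}$, I would compute $\mathrm{Tr}_g(T)=g^{\mu\nu}T_{\mu\nu}$ directly. Only the symmetric combination $g^{\mu\nu}\bigl(\la\Psi,\del_\mu\cdot\nabla_\nu\Psi\ra_{\ourD} - \la\del_\mu\cdot\nabla_\nu\Psi,\Psi\ra_{\ourD}\bigr)$ survives, which equals $\la\Psi,\opDirac\Psi\ra_{\ourD}-\la\opDirac\Psi,\Psi\ra_{\ourD}$. Inserting $\opDirac\Psi=-\mathrm{i}M\Psi$ and using the sesquilinearity of $\la\cdot,\cdot\ra_{\ourD}$ (conjugate-linear in the first argument) gives $\mathrm{Tr}_g(T)=M\,\la\Psi,\Psi\ra_{\ourD}$, whence the $-\tfrac{M}{2}g_{\alpha\beta}\la\Psi,\Psi\ra_{\ourD}$ contribution.

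The core algebraic task is the simplification of the cubic correction
\be
\tfrac{\mathrm{i}}{16}\,g^{\alpha\gamma}\bigl[\la\Psi,\del_\mu\cdot\del_\alpha\cdot\nabla_\gamma\del_\nu\cdot\Psi\ra_{\ourD}-\la\del_\mu\cdot\del_\alpha\cdot\nabla_\gamma\del_\nu\cdot\Psi,\Psi\ra_{\ourD}\bigr]\;+\;\mathrm{sym}(\mu,\nu).
\ee
Using the antisymmetry $\la X\cdot\Phi,\Psi\ra_{\ourD}=-\la\Phi,X\cdot\Psi\ra_{\ourD}$ for a vector $X$ (Section~\ref{section===42}), each such difference rewrites as $\la\Psi,(\del_\mu\cdot\del_\alpha\cdot V+V\cdot\del_\alpha\cdot\del_\mu)\cdot\Psi\ra_{\ourD}$ with $V:=\nabla_\gamma\del_\nu$. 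Iterated application of the Clifford anticommutator \eqref{clifford-anticomm} collapses this three-vector product to a single vector:
\be
\del_\mu\cdot\del_\alpha\cdot V+V\cdot\del_\alpha\cdot\del_\mu \;=\; -\,2\,V_\alpha\,\del_\mu \;+\; 2\,V_\mu\,\del_\alpha \;-\; 2\,g_{\mu\alpha}\,V,
\ee
where $V_\beta:=g(V,\del_\beta)$. After contracting with $g^{\alpha\gamma}$, substituting $V=\Gamma^\delta_{\gamma\nu}\del_\delta$, and using the standard formula for the Christoffel symbols in terms of $\del g$, each of the three resulting terms becomes a coefficient involving $\del g$ times a Dirac form of the shape $\la\Psi,\del_\gamma\cdot\Psi\ra_{\ourD}$. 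Symmetrizing in $(\mu,\nu)$ and collecting yields exactly the last two terms of \eqref{eq6-27-sept-2025}.

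The main obstacle is bookkeeping in this last step: one must manage the interplay between (i) the Clifford reordering of three vectors, (ii) the symmetrization in $(\mu,\nu)$, and (iii) the expansion $\Gamma^\delta_{\gamma\nu}=\tfrac12 g^{\delta\beta}(\del_\gamma g_{\nu\beta}+\del_\nu g_{\gamma\beta}-\del_\beta g_{\gamma\nu})$, so that the resulting coefficients of $\la\Psi,\del_\gamma\cdot\Psi\ra_{\ourD}$ match the precise combination $\tfrac{1}{4}g^{\gamma\beta}\del_\beta g_{\mu\nu}-\tfrac{1}{8}g^{\gamma\beta}(\del_\mu g_{\nu\beta}+\del_\nu g_{\mu\beta})$ stated in \eqref{eq6-27-sept-2025}. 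The cancellations arranging themselves in this form are a manifestation of the Leibniz property $\nabla g=0$ and the anti-symmetry of Clifford multiplication, and no further input (e.g.\ curvature identities of Lemma~\ref{lemma-51}) is needed here.
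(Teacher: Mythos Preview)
Your overall strategy—substitute the modified derivative, compute the trace via the Dirac equation, then collapse the cubic residue with Clifford algebra—is exactly the paper's. But your key algebraic move has the wrong sign: in this paper the Dirac sesquilinear form is \emph{symmetric} under Clifford multiplication by a real vector, $\la X\cdot\Phi,\Psi\ra_{\ourD}=\la\Phi,X\cdot\Psi\ra_{\ourD}$ (this is the defining property in Section~\ref{section=31}, and it is used explicitly in the proof of Proposition~\ref{propo-JDK02}), not antisymmetric as you claim. With the correct sign, moving the three-vector product across the form reverses the order without a sign change, so one obtains the \emph{difference} $\del_\mu\cdot\del_\alpha\cdot V - V\cdot\del_\alpha\cdot\del_\mu$ rather than your sum. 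Your anticommutator identity is algebraically correct, but it is not the combination that actually arises here; carrying your formula through produces an extra term of the shape $\Gamma^\gamma_{\gamma\nu}\,\la\Psi,\del_\mu\cdot\Psi\ra_{\ourD}$ (and its $(\mu,\nu)$ partner) that does not appear in the stated identity.

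The paper also packages the computation slightly differently: rather than isolating $\del_\alpha$ and $V=\nabla_\gamma\del_\nu$ separately, it treats the correction as the two-vector block $\opDirac\del_\nu = g^{\alpha\gamma}\del_\alpha\cdot\nabla_\gamma\del_\nu$ and computes directly that
\[
\del_\mu\cdot\opDirac\del_\nu - \opDirac\del_\nu\cdot\del_\mu \;=\; 2\,g^{\gamma\beta}\bigl(\del_\beta g_{\mu\nu}-\del_\mu g_{\nu\beta}\bigr)\,\del_\gamma,
\]
which collapses to a single vector; symmetrizing in $(\mu,\nu)$ then reads off the last two terms of \eqref{eq6-27-sept-2025} immediately. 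Once you correct the Dirac-form sign and work with $\opDirac\del_\nu$ as a block, the remainder of your sketch goes through unchanged.
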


\begin{proof}
\bse
We perform the following calculation: 
\begin{equation}\label{eq5-27-sept-2025}
\aligned
\mathrm{Tr}_g(T) :& =g^{\mu\nu}T_{\mu\nu} 
= \frac{\mathrm{i}}{4}\big(\la\Psi,\opDirac\Psi\ra_{\ourD} + \la \Psi,\opDirac\Psi\ra_{\ourD}
\big)
- \frac{\mathrm{i}}{4}\big(\la\opDirac\Psi,\Psi \ra_{\ourD} + \la\opDirac\Psi,\Psi \ra_{\ourD}\big)
\\
& =M\la\Psi,\Psi \ra_{\ourD}.
\endaligned
\end{equation}
On the other hand, we have 
\be
\nabla_{\nu}\Psi = \widehat{\del_{\nu}}\Psi 
+ \frac{1}{4}g^{\alpha\beta}\del_{\alpha}\cdot\nabla_{\beta}\del_{\nu}\cdot\Psi
= \widehat{\del_{\nu}}\Psi +  \frac{1}{4}\opDirac\del_{\nu}\cdot\Psi
\ee
and 
\begin{equation}\label{eq4-27-sept-2025}
\aligned
T_{\mu\nu} & =\frac{\mathrm{i}}{4} \Big(\la\Psi,\del_\mu \cdot\nabla_\nu \Psi \ra_{\ourD}
+ 
\la\Psi,\del_{\nu} \cdot\nabla_{\mu} \Psi \ra_{\ourD} \Big)
- 
\frac{\mathrm{i}}{4} \Big(\la \del_\mu \cdot\nabla_\nu \Psi,\Psi\ra_{\ourD}
+
\la \del_{\nu} \cdot\nabla_{\mu} \Psi,\Psi\ra_{\ourD}
\Big)
\\
& = \frac{\mathrm{i}}{4} \Big(\la\Psi,\del_\mu \cdot\widehat{\del_{\nu}} \Psi \ra_{\ourD}
+ 
\la\Psi,\del_{\nu} \cdot\widehat{\del_{\mu}} \Psi \ra_{\ourD} \Big)
- 
\frac{\mathrm{i}}{4} \Big(\la \del_\mu \cdot\widehat{\del_\nu} \Psi,\Psi\ra_{\ourD}
+
\la \del_{\nu} \cdot\widehat{\del_{\mu}} \Psi,\Psi\ra_{\ourD}
\Big)
\\
& \quad +\frac{\mathrm{i}}{16}\la \Psi, \del_{\mu}\cdot\opDirac\del_{\nu}\cdot\Psi\ra_{\ourD}
+\frac{\mathrm{i}}{16}\la\Psi,\del_{\nu}\cdot\opDirac\del_{\mu}\cdot\Psi\ra_{\ourD}
\\
& \quad - \frac{\mathrm{i}}{16}\la \Psi, \opDirac\del_{\nu}\cdot\del_{\mu}\cdot\Psi\ra_{\ourD}
- \frac{\mathrm{i}}{16}\la\Psi,\opDirac\del_{\mu}\cdot\del_{\nu}\cdot\Psi\ra_{\ourD}.
\endaligned
\end{equation}
We note that
\be
\aligned
\del_{\mu}\cdot\opDirac\del_{\nu} - \opDirac\del_{\nu}\cdot\del_{\mu}
& = 
g^{\alpha\beta}\Gamma_{\beta\nu}^{\gamma}
\big(\del_{\mu}\cdot\del_{\alpha}\cdot\del_{\gamma}
- \del_{\alpha}\cdot\del_{\gamma}\cdot\del_{\mu}
\big)
\\
& = g^{\alpha\beta}\Gamma_{\beta\nu}^{\gamma}
\big(-2g_{\alpha\mu}\del_{\gamma}- \del_{\alpha}\cdot\del_{\mu}\cdot\del_{\gamma} - \del_{\alpha}\cdot\del_{\gamma}\cdot\del_{\mu}\big)
\\
& = g^{\alpha\beta}\Gamma_{\beta\nu}^{\gamma}
\big(-2g_{\alpha\mu}\del_{\gamma}+2g_{\mu\gamma}\del_{\alpha}\big)
\\
& = 2g_{\mu\gamma}g^{\alpha\beta}\Gamma_{\beta\nu}^{\gamma}\del_{\alpha} 
- 2\Gamma_{\mu\nu}^{\gamma}\del_{\gamma}.
\endaligned
\ee
Then we deduce that 
\begin{equation}
\del_{\mu}\cdot\opDirac\del_{\nu} - \opDirac\del_{\nu}\cdot\del_{\mu}
=
2g^{\gamma\beta}\big(\del_{\beta}g_{\mu\nu} - \del_{\mu}g_{\nu\beta}\big)\del_{\gamma}, 
\end{equation}
therefore 
\be
\aligned
&2g_{\mu\gamma}g^{\alpha\beta}\Gamma_{\beta\nu}^{\gamma}\del_{\alpha} 
- 2\Gamma_{\mu\nu}^{\gamma}\del_{\gamma}
\\
& =g_{\mu\gamma}g^{\alpha\beta}g^{\gamma\delta}\big(\del_{\beta}g_{\nu\delta} + \del_{\nu}g_{\beta\delta} - \del_{\delta}g_{\beta\nu}\big)\del_{\alpha}
- g^{\gamma\delta}(\del_{\mu}g_{\nu\delta} + \del_{\nu}g_{\mu\delta} - \del_{\delta}g_{\mu\nu})\del_{\gamma}
\\
& = g^{\gamma\beta}\big(\del_{\beta}g_{\mu\nu} + \del_{\nu}g_{\beta\mu} - \del_{\mu}g_{\beta\nu}\big)\del_{\gamma}
-g^{\gamma\beta}(\del_{\mu}g_{\nu\beta} + \del_{\nu}g_{\mu\beta} - \del_{\beta}g_{\mu\nu})\del_{\gamma}
\\
& = 2g^{\gamma\beta}\big(\del_{\beta}g_{\mu\nu} - \del_{\mu}g_{\nu\beta}\big)\del_{\gamma}.
\endaligned
\ee
Recalling \eqref{eq4-27-sept-2025}, we then obtain
\begin{equation}
\aligned
T_{\mu\nu} & =\frac{\mathrm{i}}{4} \Big(\la\Psi,\del_\mu \cdot\widehat{\del_{\nu}} \Psi \ra_{\ourD}
+ 
\la\Psi,\del_{\nu} \cdot\widehat{\del_{\mu}} \Psi \ra_{\ourD} \Big)
- 
\frac{\mathrm{i}}{4} \Big(\la \del_\mu \cdot\widehat{\del_\nu} \Psi,\Psi\ra_{\ourD}
+
\la \del_{\nu} \cdot\widehat{\del_{\mu}} \Psi,\Psi\ra_{\ourD}
\Big)
\\
& \quad +\frac{\mathrm{i}}{4}g^{\gamma\beta}\del_{\beta}g_{\mu\nu}\la \Psi,\del_{\gamma}\cdot\Psi\ra_{\ourD} 
- \frac{\mathrm{i}}{8}g^{\gamma\beta}(\del_{\mu}g_{\nu\beta} + \del_{\nu}g_{\mu\beta})\la \Psi,\del_{\gamma}\cdot\Psi\ra_{\ourD}.
\endaligned
\end{equation}
This combined with \eqref{eq5-27-sept-2025} leads us to the desired result.
\ese
\end{proof}

%------------------------------------------------------------------------------------- 

For simplicity in the notation, we set
\begin{equation}\label{eq11-03-oct-2025}
\aligned
& \Sbb(g,\del g,\Psi,\widehat{\del}\Psi)_{\alpha\beta} 
:= \mathrm{Tr}_g(T)g_{\alpha\beta} - 2T_{\alpha\beta}
\\
& = 
\frac{\mathrm{i}}{2} \Big(\la \del_\mu \cdot\widehat{\del_\nu} \Psi,\Psi\ra_{\ourD}
+
\la \del_{\nu} \cdot\widehat{\del_{\mu}} \Psi,\Psi\ra_{\ourD}
\Big)
- \frac{\mathrm{i}}{2} \Big(\la\Psi,\del_\mu \cdot\widehat{\del_{\nu}} \Psi \ra_{\ourD}
+
\la\Psi,\del_{\nu} \cdot\widehat{\del_{\mu}} \Psi \ra_{\ourD} \Big)
\\
& \quad +Mg_{\alpha\beta}\la\Psi,\Psi\ra_{\ourD}
- \frac{\mathrm{i}}{2}g^{\gamma\beta}\del_{\beta}g_{\mu\nu}\la \Psi,\del_{\gamma}\cdot\Psi\ra_{\ourD} 
+ \frac{\mathrm{i}}{4}g^{\gamma\beta}(\del_{\mu}g_{\nu\beta} + \del_{\nu}g_{\mu\beta})\la \Psi,\del_{\gamma}\cdot\Psi\ra_{\ourD}.
\endaligned
\end{equation}
Based on Proposition~\ref{prop1-27-sept-2025} and Lemma \ref{lem1-01-oct-2025}, we formulate the Cauchy problem associated with the Einstein-Dirac system into the following form.
\begin{equation}\label{eq03-02-oct-2025}
\aligned
g^{\mu\nu}\del_{\mu}\del_{\nu}u_{\alpha\beta} 
& =\Fbb(g,g;\del u,\del u)_{\alpha\beta} 
+ u^{\mu\nu}\del_{\mu}\del_{\nu}\gref_{\alpha\beta} 
+ \Sbb(g,\del g,\Psi,\widehat{\del}\Psi)_{\alpha\beta}
\\
& \quad + 2\Fbb(g,g;\del\gref,\del u)_{\alpha\beta} 
+ 2\Fbb(u,\gref;\del \gref,\del \gref)_{\alpha\beta} 
+ \Fbb(u,u;\del\gref,\del\gref)_{\alpha\beta}
\\
& \quad -2\big(\mathbb{G}(g,\del g,\Gamma,\del\Gamma)_{\alpha\beta} - \mathbb{G}(\gref,\del \gref,\Gamma_{\mathrm{r}},\del\Gamma_{\mathrm{r}})_{\alpha\beta}\big),
\\
\opDirac\Psi + \mathrm{i}M\Psi & = 0,
\endaligned
\end{equation}
\begin{equation}\label{eq11-04-oct-2025}
\Gamma^{\lambda}  
= {\Gamma_{\mathrm{r}}}^{\lambda} 
+ \del_{\alpha}\big(\Phisans_{\beta}^{\delta}\big)\Phisans_{\delta}^{\lambda} u^{\alpha\beta},
\end{equation}
with the following initial data
\begin{equation}
\aligned
&u_{\alpha\beta}|_{\{t=1\}} = u_{0\alpha\beta},\quad 
\del_tu_{\alpha\beta}|_{\{t=1\}} = u_{1\alpha\beta},
\\
& \Psi|_{\{t=1\}} = \Psi_0.
\endaligned
\end{equation}

}

%================================================================================

\section{Quantitative statement of the nonlinear stability result}
\label{section=N15}

\subsection{Initial smallness conditions}

The following notation was introduced earlier. For a scalar function $u$ defined in $\Mcal_{[s_0,s_1]}$, the associated energy (related to the flat wave operator $\eta^{\mu\nu}\del_{\mu}\del_{\nu}$, cf. \cite[Eq.(3.18)]{PLF-YM-PDE}) is defined as
\begin{equation}\label{eq1-01-oct-2025}
\aligned
\Ebf_{\kappa}(s,u) & =
\int_{\Mcal_s}\Big(|\del_t u|^2 + \sum_a|\del_a u|^2 + 2(x^a/r)\del_rT(s,r)\del_tu\del_au\Big)\omega^{2\kappa}\diff x
\\
& = \int_{\Mcal_s}\Big(|\zeta\del_t u|^2 + \sum_a|\delb_a u|^2\Big)\omega^{2\kappa}\diff x.
\endaligned
\end{equation}
We also recall its high-order version:
\be
\Ebf_{\kappa}^{p,k}(s,u) := \sum_{\ord(I)\leq p\atop\rank(I)\leq J}\hspace{-0.3cm}\Ebf_{\kappa}(s,\mathscr{Z}^Iu),
\ee
where $\mathscr{Z}^I$ represents an admissible operator. We emphasize that when talking about energies associated with a {\sl scalar}, we always refer to \eqref{eq1-01-oct-2025}, while when talking the energies associated with a {\sl spinor}, we always refer to \eqref{equa-28-sept-2025a}.

In a curved spacetime, the above scalar energy takes the form
\begin{equation}\label{eq4-02-oct-2025}
\Ebf_{g,\kappa}(s,u) = \int_{\Mcal_s}\Big((-g^{00}|\del_tu|^2 + g^{ab}\del_au\del_bu ) + 2(x^a/r)\del_rT g^{a\beta}\del_tu\del_{\beta}u\big)\omega^{2\kappa}\diff x.
\end{equation}
We also introduce its high-order version
$$
\Ebf_{g,\kappa}^{p,k}(s,u) := \sum_{\ord(I)\leq p\atop\rank(I)\leq J}\hspace{-0.3cm}\Ebf_{g,\kappa}(s,\mathscr{Z}^Iu).
$$

We recall \cite[Section 10.3, 10.4]{PLF-YM-PDE} for the construction from the geometric initial data set to the PDE initial data set. In the present article we consider the same situation for the metric perturbation, i.e., we take
\begin{equation}
\|\la r\ra^{\kappa+\ord(I)}\del^I\del_au_{0\alpha\beta}\|_{L_2(\RR^3)} 
+ \|\la r\ra^{\kappa+\ord(I)}\del^I u_{1\alpha\beta}\|_{L^2(\RR^3)}
\leq \eps_{\Ebf},\quad \ord(I)\leq N,
\end{equation}
and for the spinor field, we assume the following smallness condition:
\begin{equation}
\|\la r\ra^{\mu+N+1}|\widehat{\del}^J\Psi|_{\vec{n}}\|_{L^2(\RR^3)}\leq \eps_{\Ebf},\quad \ord(J)\leq N+1.
\end{equation}
Here $\kappa,\mu$ are positive constants. Different choice of $(\kappa,\mu)$ leads us to (slightly) different strategy of bootstrap argument. Remark that these bounds leads us to the following smallness condition on the restriction of the local solution on $\Mcal_{s_0}$ (cf. \cite[Proposition~10.5]{PLF-YM-PDE}):
\begin{equation}\label{eq5-02-oct-2025}
\Ebf_{g,\kappa}^N(s_0,u)^{1/2} + \Ebf_{\mu}^N(s_0,\Psi)^{1/2} + \Ebf_{\mu}^N(s_0,\widehat{\del}\Psi)^{1/2}\leq C_g\eps_{\Ebf},
\end{equation}
where $C_g$ is a constant determined by $N$ and the system. This implies the following smallness conditions that we will apply in the bootstrap argument:
\begin{equation}\label{eq3-02-oct-2025}
\Ebf_{\kappa}^N(s_0,u)^{1/2} + \Ebf_{\mu}^N(s_0,\Psi)^{1/2} + \Ebf_{\mu}^N(s_0,\widehat{\del}\Psi)^{1/2}\leq C_0\eps_{\Ebf},
\end{equation}
where $\Ebf_{\kappa}^N(s_0,u)^{1/2} := \sum_{\alpha,\beta}\Ebf_{\kappa}^N(s_0,u_{\alpha\beta})^{1/2}$, and $C_0$ a constant determined by $N$. This can be guaranteed by \eqref{eq5-02-oct-2025}. In fact when $\eps$ is sufficiently small, the local theory tells that the solution $u_{\alpha\beta}$ remain small (controlled by $\eps$) in the region limited by $\{t=1\}$ and $\Mcal_{s_0}$ with a fixed $s_0$ (in the following discussion we fix $s_0=2$). Then we apply \cite[Lemma~17.4]{PLF-YM-PDE}, in which we have essentially showed that when $g$ is sufficiently close to the Minkowski metric, then $\Ebf_{g,\kappa}(s,u)$ and $\Ebf_{\kappa}(s,u)$ are equivalent. 

%------------------------------------------------------------------------------

We now turn our attention to the reference metric $\gref$, which is a $(N,\theta,\epss,\ell)$- admissible reference in $\{\xsans\}$. We recall Proposition~\ref{prop1-01-oct-2025} and we fix
\begin{equation}\label{eq2-01-oct-2025}
\eps_s = \eps = \max\{\epss ,\eps_{\Ebf}\}.
\end{equation}
Then we obtain the following decay properties for $\gref$ form Proposition~\ref{prop1-01-oct-2025}:
\begin{subequations}\label{eq-reference-bounds}
\begin{equation}\label{eq1-03-oct-2025}
\aligned
|\Hreff_{\alpha\beta}\big|^{\Kscr}_p + \big|\Hreff^{\alpha\beta}\big|^{\Kscr}_p
\lesssim_N& \Kref\eps(t+r+1)^{-1+(p+1)\theta},\quad p\leq N+2,
\\
\big|\del_{\mu}\gref^{\alpha\beta}\big|^{\Kscr}_p
+ |\del_{\mu}\gref^{\alpha\beta}|^{\Kscr}_p
\lesssim_N& \Kref\eps(1+|r-t|)^{-1}(t+r+1)^{-1+(p+1)\theta},\quad p\leq N+1,
\\
\big|\dels_{\mu}\gref^{\alpha\beta}\big|^{\Kscr}_p
+ |\del_{\mu}\gref^{\alpha\beta}|^{\Kscr}_p
\lesssim_N& \Kref\eps (t+r+1)^{-2+(p+1)\theta},\quad p\leq N+1,
\\
\big|\del_{\mu}\del_{\nu}\gref^{\alpha\beta}\big|^{\Kscr}_p + |\del_{\mu}\del_{\nu}\gref^{\alpha\beta}|^{\Kscr}_p
\lesssim_N& \Kref\eps(1+|r-t|)^{-2}(t+r+1)^{-1+(p+1)\theta},\quad p\leq N,
\\
\big|\del_{\mu}\dels_{\nu}\gref^{\alpha\beta}\big|^{\Kscr}_p + |\del_{\mu}\dels_{\nu}\gref^{\alpha\beta}|^{\Kscr}_p
\lesssim_N& \Kref\eps(1+|r-t|)^{-1}(t+r+1)^{-2+(p+1)\theta},\quad p\leq N;
\endaligned
\end{equation}
\begin{equation}
\big|R[\gref]_{\alpha\beta}\big|^{\Kscr}_{N} + (1+r+t)\big|\del_{\gamma}R[\gref]_{\alpha\beta}\big|^{\Kscr}_{N-1}\lesssim_N \eps^2(1+t+r)^{-4+\theta},
\end{equation}
\begin{equation}\label{eq1-05-oct-2025}
\big|{\Gamma_{\mathrm{r}}}^{\lambda}\big|^{\Kscr}_{N+1} 
+ (1+|r-t|)\big|\del_{\gamma}{\Gamma_{\mathrm{r}}}^{\lambda}\big|^{\Kscr}_N\lesssim_N \Kref\eps(t+r+1)^{-2+\theta},
\end{equation}
\begin{equation}\label{eq7-02-oct-2025}
\Hreff^{\Ncal 00} := (\diff \tavec{} - \diff \ravec{}, \diff \tavec{} - \diff \ravec{})_{\gref}
< -  C_L\eps r^{-1+\theta}
\qquad \text{ in }\quad \Mcal_{[s_0,s_1]} \cap\{ r \geq (3/4) t \}.
\end{equation}
\end{subequations}
\subsection{Stability of the light-bending property}\label{subsect1-02-oct-2025}
As we have seen in the {\sl flatness conditions} introduced in Section~\ref{section=N11}, the analysis on spinor fields depends essentially on the light-bending condition \eqref{eq-bending-condition}. Although \eqref{eq7-02-oct-2025} guarantees this property for the reference, we want it holds for the total metric 
\be
g_{\alpha\beta}  = h_{\alpha\beta} + \eta_{\alpha\beta} = u_{\alpha\beta} + \Hreff_{\alpha\beta} + \eta_{\alpha\beta}.
\ee
To this juncture, we can chose $C_L$ sufficiently large, such that the property on $\Hreff$ would not be destroyed by $u_{\alpha\beta}$. More precisely, consider the free-linear wave system:
\begin{equation}
\Box v_{\alpha\beta} = 0,\quad v_{\alpha\beta}|_{\{t=1\}} = u_{0\alpha\beta},\quad \del_tv_{\alpha\beta}|_{\{t=1\}} = u_{1\alpha\beta}.
\end{equation}
Then the solution enjoys the following estimate (cf. \cite[Proposition~10.6]{PLF-YM-PDE}):
\begin{equation}\label{eq5-05-oct-2025}
|v_{\alpha\beta}|\lesssim L_0\eps(t+r+1)^{-1}
\end{equation}
where $L$ is a constant determined by $\kappa$. Therefore, when $\eps$ is sufficiently small, we have
\begin{equation}\label{eq8-02-oct-2025}
|v^{\Ncal00}|\leq L\eps, 
\end{equation} 
where
\be
v^{\Ncal00} = v(\diff t- \diff r,\diff t- \diff r), \quad\text{with}\quad v(\del_{\alpha},\del_{\beta}):=v_{\alpha\beta}
\ee
and $L$ a constant determined by $\kappa$. We then chose $C_L$ sufficiently large, such that
\begin{equation}\label{eq9-02-oct-2025}
\Kref\leq 2C_L\theta^{-1},\quad L\leq \frac{1}{3}C_L.
\end{equation}
Then the \textbf{light-bending condition} 
\begin{equation}\label{eq10-02-oct-2025}
h^{\Ncal00}\leq - \frac{1}{3}C_L\eps s^{-1+\theta},
\end{equation}
holds at least locally in time. This is because our solution $g_{\alpha\beta} = u_{\alpha\beta} + \gref_{\alpha\beta}$, and $u_{\alpha\beta}$ can be again decomposed as the initial-data contribution $v_{\alpha\beta}$ plus the nonlinear source contribution. In a given time interval, the latter is of size $\eps^2$ and it is the initial-data contribution dominates. We also see that from \eqref{eq7-02-oct-2025} that the reference gives a negative contribution with large $C_L$, while \eqref{eq8-02-oct-2025} and \eqref{eq9-02-oct-2025} show that the initial-data contribution does not destroy completely the light-bending property, at least  locally in time. Thus we conclude that there exists a time $s_1>0$ such that \eqref{eq10-02-oct-2025} holds.

In our global analysis, we will show that (by the bootstrap argument), the $\eps^2-$smallness of the nonlinear source contribution is in fact uniform with respect to time. Thus it will never destroy the light-bending property.

\subsection{Statement of the nonlinear
 stability result in the PDEs setting }
\label{subsec2-22-oct-2025}
We are now ready to state the nonlinear stability result in a quantitative manner.

\begin{theorem}
\label{subsec2-22-oct-2025-theo}
Let $\gref$ be a $(N,\theta,\epss,\ell)$ admissible reference with $N$ sufficiently large, and $\theta,\epss$ sufficiently small. Then there exists $\eps_0>0$, such that when
$$
\max\{\epss,\eps_{\Ebf}\}< \eps_0
$$
then any local solution to \eqref{eq03-02-oct-2025} together with \eqref{eq11-04-oct-2025} satisfying the following initial smallness conditions
\begin{equation}
\Ebf_{\kappa}^N(s_0,u)^{1/2} + \Ebf_{\mu}^N(s_0,\Psi)^{1/2} 
+ \Ebf_{\mu}^N(s_0,\widehat{\del}\Psi)^{1/2}\leq C_0\eps_{\Ebf},
\end{equation}
with $\kappa\in (1/2,3/4)$ and $\mu\in(3/4,1)$ extends to $\Mcal_{[s_0,+\infty)}$. 
\end{theorem}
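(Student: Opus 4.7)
The plan is to run a global continuity argument on the slab $\Mcal_{[s_0,s_1]}$, closing a bootstrap on a carefully tiered hierarchy of norms for $(u,\Psi)$. On the bootstrap slab I would assume, with constants $C_1$ sufficiently larger than the universal constant $C_0$ of~\eqref{eq3-02-oct-2025}: (i) the full $(\eps_s,\delta,\kappa)$--flatness package of Section~\ref{section=N11}, so that Propositions~\ref{prop-spin-energy}--\ref{prop-spin-orthgonal} all apply; (ii) the light-bending condition~\eqref{eq-bending-condition} in the region $\{r\ge 3t/4\}$ with a margin $-H^{\Ncal 00}\ge c\, C_L\eps\, r^{-1+\theta}$ inherited from~\eqref{eq-lb-condition}; (iii) tiered hyperboloidal-energy bounds $\Ebf_\kappa^{N-k,N-k}(s,u)^{1/2}\le C_1\eps\,(1+s)^{C_1\eps(N-k)\delta}$ for the metric perturbation, matching the structure of~\cite{PLF-YM-PDE}; (iv) weighted spinor energies $\Ebf_\mu^{N-k,N-k}(s,\Psi)^{1/2}+\Ebf_\mu^{N-k,N-k}(s,\widehat{\del}\Psi)^{1/2}\le C_1\eps\,(1+s)^{C_1\eps(N-k)\delta}$; and (v) the sharp pointwise decay $s^{3/2}\bigl((s/t)[\del\Psi]_{N-4}+\lapsb M\,[\Psi]_{N-4}\bigr)\lesssim C_1\eps$ dictated by the massive Dirac dispersion.

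The order of improvements I would carry out is as follows. I would first transport all bootstrap assumptions to the light-bending chart, verifying that the reference bounds~\eqref{eq-reference-bounds} and the wave-gauge identity~\eqref{eq11-04-oct-2025} (preserved from initial data by the contracted Bianchi identity) keep the effective vector $W$ within~\eqref{eq-France-condition}. Next, I would close the metric estimates following~\cite{PLF-YM-PDE}, but with the \emph{additional} source $\Sbb(g,\del g,\Psi,\widehat{\del}\Psi)_{\alpha\beta}$ from~\eqref{eq11-03-oct-2025}; thanks to the dispersive estimate $|\Psi|_{\vec n}\lesssim s^{-3/2}$ in the hyperboloidal region and the $\la r-t\ra^{\kappa}$ weight in the merging-Euclidean region, each occurrence of $\langle\Psi,\del\cdot\widehat{\del}\Psi\rangle_{\ourD}$ or $Mg_{\alpha\beta}\langle\Psi,\Psi\rangle_{\ourD}$ yields an integrable factor absorbable into the metric bootstrap. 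Afterwards, I would propagate the spinor energies via Proposition~\ref{prop-spin-energy}, controlling the commutator remainders $\Phi^I$ and $\Phi^I_\mu$ by Proposition~\ref{prop1-14-aout-2025}; the cancellations provided by Proposition~\ref{prop1-14-oct-2025} in the near-light-cone region $\Mcal^{\near}_{[s_0,s_1]}$ are essential to avoid borderline losses driven by $|\del H|$.

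Finally, pointwise bounds would be upgraded in two ways: via the Klainerman--Sobolev inequalities of Proposition~\ref{prop-spin-Sobolev} on each slice $\Mcal_s$, and ---inside the hyperboloidal interior--- via the orthogonal-curve integration of Proposition~\ref{prop-spin-orthgonal}. The rank hierarchy is essential at this step: the source $|LH|_{p-1,k-1}$ appearing in~\eqref{eq11-26-aout-2025} vanishes at $k=0$, permitting an induction on rank $k$ that produces the sharp $(s/t)^{3/2}$ decay at each level without loss of derivatives. Once every bootstrap bound is improved from $C_1\eps$ to $\tfrac{1}{2}C_1\eps$ (possibly up to an exponent slightly smaller than $C_1\eps(N-k)\delta$), a standard continuity argument extends the local solution to $s_1=+\infty$, and the margin of~\eqref{eq9-02-oct-2025} guarantees that the light-bending condition~\eqref{eq-bending-condition} is never saturated during the evolution.

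The main obstacle will be the simultaneous closure of the \emph{top-order} metric energy and the spinor pointwise decay in the near-light-cone region: the Dirac commutator $[\mathscr{Z}^I,\opDirac]\Psi$ at order $N$ couples to $|H|_{N+1}$, which is \emph{not} controlled at top order, so one must exploit the refined decomposition of Proposition~\ref{prop1-06-oct-2026}---trading a generic $\del H^{\Scal 0\alpha}$ for the tangential $\delsN H$ via the generalized wave coordinate condition---and absorb the resulting $\la r-t\ra/r$ factor into the weighted bulk term of~\eqref{eq4-10-oct-2025(l)}. Managing the borderline weights $\kappa\in(1/2,3/4)$ and $\mu\in(3/4,1)$ simultaneously across the metric and matter equations is the most delicate balance, but the strict inequalities $\kappa>1/2$ and $\mu>3/4$ provide the necessary integrable margins, and the structural cancellations of the gauge-invariant Clifford formalism (Lemma~\ref{lem1-24-feb-2025} together with the quadratic--cubic decomposition of Proposition~\ref{prop-quad-cubic}) prevent those margins from being consumed by the nonlinear feedback.
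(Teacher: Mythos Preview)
Your overall bootstrap strategy and the list of analytic tools (flatness package, light-bending chart, Section~\ref{section=N9} near-light-cone decomposition, rank-hierarchy for Proposition~\ref{prop-spin-orthgonal}) match the paper's approach. Two points, however, deserve correction.

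First, your description of the ``main obstacle'' is inaccurate. There is no coupling to $|H|_{N+1}$ in the Dirac commutator: in Proposition~\ref{prop1-19-july-2025} the term $[\del\Psi]_{p_1-1,k_1-1}|H|_{p_2+1,k_2+1}$ vanishes when $k_1=0$, forcing $p_1\ge 1$ and hence $p_2+1\le N$. The genuine difficulty in $\Mcal^{\near}_{[s_0,s_1]}$ is the \emph{second} sum $[\del\Psi]_{p_1,k_1}|\del H|_{p_2,k_2}$, where $|\del H|\sim r^{-1}\la r-t\ra^{-\kappa}$ fails to be integrable against the spinor energy. The resolution via Proposition~\ref{prop1-14-oct-2025} produces not only the $\la r-t\ra/r$ and $|\delsN H|$ pieces you mention, but also two further types: the $\vec\gamma$-absorbable terms $[\Psi]_{\vec\gamma,p_1,k_1}|\del H|_{p_2,k_2}$ (handled by the weighted bulk in~\eqref{eq4-10-oct-2025(l)}) and terms involving $[\Psihat]_{p,k}$ with $\Psihat_a:=t^{-1}\widehat{L_a}\Psi$.

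Second, and more importantly, you omit a bootstrap assumption that the paper regards as essential. In addition to the energy tiers, the paper postulates in~\eqref{eq1-11-oct-2025} a \emph{uniform} (no $s$-growth) bound on $\Ebf_{\mu}^{\ME}(s,t^{-1}\widehat{L_a}\mathscr{Z}^J\Psi)$ for all top-rank operators $\mathscr{Z}^J$. This is precisely what closes the $[\Psihat]$-terms coming out of Section~\ref{section=N9}; its improvement in Section~\ref{subsec1-11-oct-2025} uses the auxiliary equation~\eqref{eq5-10-oct-2025} for $t^{-1}\Psi$ together with Lemma~\ref{lem1-09-oct-2025}. Without this extra bootstrap your scheme would not close the near-light-cone spinor estimate at top order. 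Relatedly, the paper's energy hierarchy is a two-tier structure (order $N$ with $s^{1+\delta}$ growth for the spinor, order $N-5$ with $s^{\delta}$ growth) rather than the $(1+s)^{C_1\eps(N-k)\delta}$ ladder you propose; either choice may be workable, but the paper's version is tailored to accommodate the $\Psihat$ mechanism.
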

%=================================================================================

\section{Bootstrap argument in $\Mcal^{\ME}_{[s_0,s_1]}$}
\label{section=N16}

\subsection{Bootstrap assumptions and choice of constants}\label{subsec1-02-oct-2025}

We (essentially) follow the {\bf Class B} proposed in~\cite{PLF-YM-PDE}. Considering the energy estimates in $
[s_0,s_1]$, we have the following bouns. 
\\
$\bullet$ Energy bounds: 
\begin{subequations}\label{eq1-02-oct-2025}
\begin{equation}\label{eq1-02-oct-2025-H}
\Ebf_{\kappa}^{\ME,N}(s,u)^{1/2} 
+ s^{-1}\Ebf_{\mu}^{\ME,N}(s,\Psi)^{1/2} 
+ s^{-1}\Ebf_{\mu}^{\ME,N}(s,\widehat{\del}\Psi)^{1/2} \leq C_1\eps s^{\delta},
\end{equation}
\begin{equation}\label{eq1-02-oct-2025-L}
\Ebf_{\kappa}^{\ME,N-5}(s,u)^{1/2} 
+ \Ebf_{\mu}^{\ME,N-5}(s,\Psi)^{1/2} 
+ \Ebf_{\mu}^{\ME,N-5}(s,\widehat{\del}\Psi)^{1/2} \leq C_1\eps s^{\delta},
\end{equation}
\begin{equation}\label{eq1-11-oct-2025}
\sum_{\ord(J) = p\atop \rank(J) = p}\Ebf_{\mu}^{\ME}(s,t^{-1}\widehat{L_a}(\mathscr{Z}^J\Psi))^{1/2}\leq C_1\eps, \quad p\leq N,
\end{equation}
\begin{equation}
\Ebf_{\kappa}^{\Hcal,0}(s,u)^{1/2}\lesssim C_1\eps s^{\delta}.
\end{equation}
\end{subequations}
Here $N$ is a sufficiently large integer. In the present article $N=22$ is sufficient. For energy weight indices $(\kappa,\mu)$, we chose 
\begin{equation}
1/2<\kappa<3/4,\quad 3/4<\mu<9/10.
\end{equation} 

The constant $C_1>0$ is chosen to be sufficiently large so that $C_1\eps s_0^{\delta}<\eps_{\Ebf}$. Then due to \eqref{eq3-02-oct-2025} and the continuity, \eqref{eq1-02-oct-2025} are valid on a interval $[s_0,s_1]$ with $s_1>s_0$.

The positive constant $\delta$ describes the increasing rate of the energies and will be fixed during the process of estimates. We begin by demand
\begin{equation}
0<\delta<10^{-1} \min\{\kappa-1/2,\mu-3/4\}.
\end{equation}
Furthermore, as in \cite{PLF-YM-PDE} (but more explicitly), we assume  
\begin{equation}\label{eq4-15-oct-2025}
0<\theta \ll_N \delta.
\end{equation}
More explicitly, we will see that $(12N^2+4N)\theta<\delta$ will be sufficient. This is a restriction on the reference metric $\gref$. However, this is not that restrictive, because our typical reference, the vacuum global solution constructed by \cite{LR2} in enjoys a decay rate for {\sl any} $\theta>0$.

\noindent$\bullet$  Light-bending condition: 
\begin{equation}\label{eq6-02-oct-2025}
h^{\Ncal00}\leq - \frac{1}{3}C_L\eps s^{-1+\theta},\quad \text{ in }  \Mcal_{[s_0,s_1]}.
\end{equation}
The constant $C_L$ is chosen to be sufficiently large and is determined by $\kappa$ as explained in~Section~\ref{subsect1-02-oct-2025}. Then $\Kref$ is fixed by the relation
$
\Kref = (K_0+C_L)\theta^{-1},
$
where $K_0$ is a universal constant. We then demand $C_L\geq K_0$ such that $\Kref\leq 2\theta^{-1}C_L$. In the following discussion we always fix 
\begin{equation}\label{eq2-03-oct-2025}
C_1\geq \Kref
\end{equation}
in the following discussion. We emphasize that the choice of the constants $N,\theta,\ell,\mu,\nu,\delta,C_1$ are independent on $\eps$. We thus demand $ C_1\eps<1/2$ in the following discussion.  In the following, we will apply the notation $\lesssim$ for a ``smaller or equal to up to a constant determined by \eqref{eq03-02-oct-2025} and $N$''. The dependence on the remaining constants will be explicitly expressed.

Then we will show the following strictly stronger estimates holds on same interval:
\begin{subequations}\label{eq11-02-oct-2025}
\begin{equation}
\Ebf_{\kappa}^{\ME,N}(s,u)^{1/2} 
+ s^{-1}\Ebf_{\mu}^{\ME,N}(s,\Psi)^{1/2} 
+ s^{-1}\Ebf_{\mu}^{\ME,N}(s,\widehat{\del}\Psi)^{1/2} \leq \frac{1}{2}C_1\eps s^{\delta},
\end{equation}
\begin{equation}
\Ebf_{\kappa}^{\ME,N-5}(s,u)^{1/2} 
+ \Ebf_{\mu}^{\ME,N-5}(s,\Psi)^{1/2} 
+ \Ebf_{\mu}^{\ME,N-5}(s,\widehat{\del}\Psi)^{1/2} \leq \frac{1}{2}C_1\eps s^{\delta},
\end{equation}
\begin{equation}\label{eq3-11-oct-2025}
\sum_{\ord(J) = p\atop \rank(J) = p}
\Ebf_{\mu}^{\ME}(s,t^{-1}\widehat{L_a}(\mathscr{Z}^J\Psi))^{1/2}
\leq \frac{1}{2}C_1\eps, \quad p\leq N,
\end{equation}
\end{subequations}
\begin{equation}
h^{\Ncal00}\leq - \frac{1}{2}C_L\eps s^{-1+\theta},\quad \text{ in }  \Mcal_{[s_0,s_1]}.
\end{equation}
Then improvement on $\Ebf_{\kappa}^{\Hcal,0}(s,u)$ is left to the analysis in the hyperboloidal region.

If we compare the present setting with the {\bf Class B} of \cite{PLF-YM-PDE}, we will find the following inconsistence:
\\
$\bullet$ We do not have the index $\lambda$, which describes the decay rate of general components of $\Hreff$ in \cite[Section~12]{PLF-YM-PDE}. Here the role is taken by $1- \theta$. However, in \cite{PLF-YM-PDE} we also demanded $\theta<1- \lambda$. This problem will be treated next. On the other hand, the {\sl tame decay} \cite[Eq. (12.8)]{PLF-YM-PDE} are automatically satisfied.
\\
$\bullet$ The exact wave gave condition satisfied by $g$ now becomes a generalized wave coordinate conditions \eqref{eq1-05-oct-2025}, and the wave gauge condition satisfied by the reference $\gref$ is different. These two points will be discussed in Section~\ref{subsec3-05-oct-2025}.
\\
$\bullet$ The main system contains two groups of new terms. Their estimates are included in~Section~\ref{subsec1-05-oct-2025}.
\\
$\bullet$ We need to guarantee the {\sl flatness conditions} introduced in Section~\ref{section=N11}. This will be considered shortly below. 
\\
$\bullet$ We have the bootstrap assumptions on the ``good derivatives'' of $\Psi$, i.e., \eqref{eq1-11-oct-2025}. We will see that these spinors enjoy particular estimates. These new energy bounds will be carefully used and reinforced in~Section~\ref{subsec1-11-oct-2025}.
%---------------------------------------------------------------------------------------------------------------------------------------------------------------------
\subsection{Direct estimates for metric components and spinor field}
In order to follow the argument of \cite{PLF-YM-PDE}, we take $\lambda = 1-2\theta$ which satisfies the restriction $0<\theta<1- \lambda$.

As in \cite{PLF-YM-PDE}, we have, from \eqref{eq1-02-oct-2025-H} and \eqref{eq3-19-aout-2025} of Proposition~\ref{prop-spin-Sobolev}, the following estimates on metric perturbation $u$:
\begin{equation}\label{eq3-03-oct-2025}
r\la r-t \ra^{\kappa}|\del u|_{N-3} + r^{1+\kappa}|\delsN u|_{N-3}\lesssim C_1\eps s^{\delta},
\end{equation}
which, combined with \eqref{eq1-03-oct-2025} (the second and third estimate, remark that we have assume $C_1\geq \Kref$), 
\begin{equation}\label{eq4-03-oct-2025}
r\la r-t \ra^{\kappa}|\del h|_{N-3} + r^{1+\kappa}|\delsN h|_{N-3}\lesssim C_1\eps s^{\delta},
\end{equation} 
where we have demanded that 
$$
(N-2)\theta\leq \delta/2,\quad \theta\leq 1/10.
$$ 
We recall \cite[Proposition~7.2]{PLF-YM-PDE}.
\begin{proposition}[Hardy-Poincar\'e inequality for high-order derivatives] 
\label{eq3-15-05-2020}
For any $\kappa = 1/2 + \delta$ with $\delta>0$ and any sufficiently decaying function $u$ defined in $\Mcal_{[s_0,s_1]}$ and 
for all $s \in [s_0, s_1]$ one has 
$$
\| \la r-t \ra^{-1 + \kappa} |u|_{p,k}\|_{L^2(\MME_s)} 
\lesssim \big(1+\delta^{-1} \big) \, \Ebf_\kappa^{\ME,p,k}(s,u)^{1/2} + 
\Ebf_\kappa^0(s,u)^{1/2}. 
$$
\end{proposition}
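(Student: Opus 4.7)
The plan is to reduce the high-order inequality to a zero-order weighted Hardy estimate, since the norm $|u|_{p,k}$ is defined by summing pointwise bounds on admissible derivatives $\mathscr{Z}^I u$ with $\ord(I)\leq p$ and $\rank(I)\leq k$, and since $\Ebf_\kappa^{\ME,p,k}(s,u) = \sum_I \Ebf_\kappa^{\ME}(s,\mathscr{Z}^I u)$. Applying the base case to each scalar $v = \mathscr{Z}^I u$ and summing in $L^2$ reduces the statement to showing, for every sufficiently decaying scalar $v$ on $\MME_s$,
\begin{equation*}
\|\la r-t\ra^{-1+\kappa} v\|_{L^2(\MME_s)}
\lesssim (1+\delta^{-1})\, \Ebf_\kappa^{\ME,0}(s,v)^{1/2} + \Ebf_\kappa^0(s,v)^{1/2}.
\end{equation*}

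To establish this base case, I would parameterize $\MME_s$ by $(r,\omega)\in [r^{\Hcal}(s),\infty)\times\Sbb^2$, with $t=T(s,r)$ and $dx = r^2\,dr\,d\omega$, and exploit the weighted radial identity
\begin{equation*}
\frac{d}{dr}\Big(\la r-t(s,r)\ra^{-1+2\kappa}(r-t(s,r))\Big)
= (2\kappa-1)\,\la r-t\ra^{-1+2\kappa}\big(1-\del_rT(s,r)\big) + \mathrm{l.o.t.},
\end{equation*}
where $0\leq \del_rT<1$ by \eqref{eq1-19-march-2025}, so the coefficient in front of $\la r-t\ra^{-1+2\kappa}$ is strictly positive and bounded below by $2\delta$ on any compact portion of the merging region and equals $2\kappa-1=2\delta$ on $\Mcal_s^{\Ecal}$. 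Multiplying by $|v|^2 r^2$ and integrating in $r$ from $r^{\Hcal}(s)$ to $\infty$, integrating by parts in the term with $|v|^2$ and then integrating over $\omega\in\Sbb^2$ yields
\begin{equation*}
2\delta\!\int_{\MME_s}\!\la r-t\ra^{-2+2\kappa}|v|^2\,dx
\leq 2\!\int_{\MME_s}\!\la r-t\ra^{2\kappa-1}|v|\,|\delb_r v|\,dx
+ \int_{\del\MME_s\cap\Mcal^{\Hcal}_s}\!\!|v|^2\,r^2\,d\omega,
\end{equation*}
after discarding the nonnegative boundary contribution at $r=\infty$ (which is where the ``sufficient decay'' hypothesis enters). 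Applying Cauchy--Schwarz to the right-hand side and absorbing the resulting $\la r-t\ra^{-1+\kappa}|v|$ factor into the left, together with the pointwise equivalence $\omega^\kappa \simeq \la r-t\ra^\kappa$ on $\MME_s$ (since $\omega=\aleph(r-t)\simeq 2+(r-t)_+$), yields the interior gain with constant $\delta^{-1}$.

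The inner boundary contribution on the transition sphere $\{r=r^{\Hcal}(s)\}$ must be absorbed into the second term on the right-hand side: by a standard trace inequality on the sphere $\Sbb^2$ and interpolation between $L^2$ of $v$ and $L^2$ of $\delb_a v$, the boundary integral is controlled by $\Ebf_\kappa^0(s,v)$ uniformly in $s$, since the weight $\omega^{2\kappa}$ is bounded above and below by positive constants in a neighborhood of the transition sphere (there $r-t\in[-1,0]$). The main technical obstacle is making the integration by parts rigorous when $v$ is only $L^2$-weighted (requiring an approximation argument by cutoff and smoothing sequences) and tracking constants uniformly in $s\in[s_0,s_1]$ across the merging region where $T(s,r)$ interpolates between its hyperboloidal and Euclidean behaviors; the sharp Hardy constant $1/(\kappa-1/2)=\delta^{-1}/2$ is the reason the threshold $\kappa>1/2$ is essential and the constant blows up as $\kappa\downarrow 1/2$, which is exactly the $(1+\delta^{-1})$ factor in the statement.
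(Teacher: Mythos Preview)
The paper does not prove this proposition; it is quoted verbatim from \cite[Proposition~7.2]{PLF-YM-PDE}. So there is no in-paper argument to compare against, and I can only assess your proof on its own terms.

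Your overall strategy---a radial Hardy integration by parts on $\MME_s$ with weight $\sim\langle r-t\rangle^{2\kappa-1}$, producing the constant $(2\kappa-1)^{-1}=\delta^{-1}$ and a boundary contribution at the inner sphere $\{r=r^{\Hcal}(s)\}$---is the correct mechanism for the first term on the right-hand side. However, your reduction to the zero-order case does \emph{not} deliver the statement as written. If you apply your base case to each $v=\mathscr{Z}^I u$ and sum, the boundary contribution yields
\[
\sum_{\ord(I)\le p,\ \rank(I)\le k}\Ebf_\kappa^{0}(s,\mathscr{Z}^I u)^{1/2}
\ \lesssim\ \Ebf_\kappa^{p,k}(s,u)^{1/2},
\]
which is the \emph{full-slice} energy at order $(p,k)$, not $\Ebf_\kappa^{0}(s,u)^{1/2}$. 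Since $\Ebf_\kappa^{0}\le \Ebf_\kappa^{p,k}$, the inequality you actually obtain is strictly weaker than the one stated. This matters for the application: in the merging--Euclidean bootstrap (Section~\ref{subsec1-02-oct-2025}) only $\Ebf_\kappa^{\ME,N}$ and the \emph{zero-order} hyperboloidal energy $\Ebf_\kappa^{\Hcal,0}$ are assumed bounded, so a right-hand side involving $\Ebf_\kappa^{\Hcal,p,k}$ would not close. Your proposed trace-inequality handling of the boundary term does not explain why the boundary contribution for $\mathscr{Z}^I u$ collapses to a zero-order quantity; the referenced proof in \cite{PLF-YM-PDE} must contain an additional structural step (for instance, choosing the weight so that the boundary term vanishes and instead picking up a contribution from the transition region where the weight $\omega$ turns on, which is then controlled by the zero-order energy alone) that your outline omits.

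A secondary point: the displayed identity you wrote for $\frac{d}{dr}\bigl(\langle r-t\rangle^{-1+2\kappa}(r-t)\bigr)$ does not have leading coefficient $2\kappa-1$; with the Japanese bracket one finds a coefficient interpolating between $1$ and $2\kappa$, so the factor $2\delta$ arises only after the Cauchy--Schwarz/absorption step with the pure-power weight, not from the identity itself.
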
 
This leads us to, similar to \cite[Eq. (12.30)]{PLF-YM-PDE},
\begin{equation}
\|\la r-t\ra^{-1+\kappa}u\|_{L^2(\Mcal^{\ME}_s)}\lesssim \delta^{-1}C_1\eps s^{\delta}.
\end{equation}
Also, similar to \cite[Eq. (12.33)]{PLF-YM-PDE} which is essentially the above Proposition~\ref{eq3-15-05-2020} combined with the scalar Sobolev inequality, and obtain 
\begin{equation}\label{eq5-03-oct-2025}
r\la r-t\ra^{\kappa-1}|u|_{N-2}\lesssim \delta^{-1}C_1\eps s^{\delta}.
\end{equation}
This is a  result of the Proposition~7.2 of \cite{PLF-YM-PDE} (the weighted Hardy-Poincar\'e inequality) combined with \eqref{eq1-02-oct-2025-H}. Then recalling \eqref{eq1-03-oct-2025} (the first and second estimate), we have
\begin{equation}\label{eq7-03-oct-2025}
|h|_{N-2}\lesssim \delta^{- \lambda}C_1\eps r^{-1}\la r-t\ra^{1- \kappa}s^{\delta},\quad \lambda = 1-2\theta.
\end{equation}

It is important that these pointwise estimates guarantee the {\sl flatness conditions} in $\Mcal^{\ME}_{[s_0,s_1]}$. In fact we have the following estimate on the generalized wave gauge condition:
\begin{equation}
|W^{\lambda}|_{N-2} = |\Gamma^{\lambda}|_{N-2}\lesssim \delta^{-1}C_1\eps r^{-2+\theta}.
\end{equation}
To see this, we recall Corollary~\ref{lem3-24-sept-2025} and obtain
\begin{equation}
|W^{\lambda}|_{N-2}\lesssim \delta^{-1}C_1\eps r^{-2+\theta}.
\end{equation}
Then we observe that the Light-bending property is guaranteed by the bootstrap assumption \eqref{eq6-02-oct-2025}. The Uniform spacelike property is guaranteed by \eqref{eq4-03-oct-2025} and \eqref{eq6-02-oct-2025}. The Generalized wave gauge property, the High-order condition and the Treating conditions are also direct. We focus on the Sobolev-related conditions. In fact we only need to verify the second one.
\begin{lemma}[The Sobolev-related conditions]\label{lem1-03-oct-2025}
In $\Mcal^{\ME}_{[s_0,s_1]}$, the following estimate holds:
\begin{equation}\label{eq8-03-oct-2025}
\aligned
\zeta^{-2}|\del h| + \zeta^{-1}|h|_1\lesssim 
+ \frac{\zeta^{-2}|h^{\Ncal00}|_1}{1+\zeta^{-2}|h^{\Ncal00}|}\lesssim \delta^{-1}C_1\eps\zeta^{- \delta/2}.
\endaligned
\end{equation}
\end{lemma}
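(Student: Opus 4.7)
The plan is to decompose $h_{\alpha\beta}=u_{\alpha\beta}+\Hreff_{\alpha\beta}$ and bound each piece separately, using the bootstrap assumptions \eqref{eq1-02-oct-2025} combined with the Sobolev-type pointwise inequalities \eqref{eq3-03-oct-2025}, \eqref{eq4-03-oct-2025}, \eqref{eq5-03-oct-2025}, \eqref{eq7-03-oct-2025} for the $u$-piece, and the reference estimates \eqref{eq-reference-bounds} for the $\Hreff$-piece. The reference contribution is handled first and is essentially harmless: the second and third lines of \eqref{eq1-03-oct-2025} give $|\del\Hreff|_{N-3}\lesssim \Kref\eps\,(1+|r-t|)^{-1}(1+t+r)^{-1+(N-2)\theta}$ together with analogous bounds for $|\Hreff|_1$ and $|\Hreff^{\Ncal00}|_1$, and since $\Kref\eps\lesssim C_1\eps$ (by \eqref{eq2-03-oct-2025}) while $\theta\ll\delta$, the reference contribution to each of the three summands is $\lesssim C_1\eps$, which is absorbed into $\delta^{-1}C_1\eps\,\zeta^{-\delta/2}$ using the trivial bound $\zeta\le 1$.

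For the $u$-contribution I would split $\Mcal^{\ME}_{[s_0,s_1]}$ into its Euclidean piece $\Mcal^{\Ecal}_{[s_0,s_1]}$ (where $\zeta\equiv 1$) and its transition piece $\Mcal^{\Mcal}_{[s_0,s_1]}$ (where $\zeta$ can be as small as $\sim 1/s$ along the hyperboloidal face). In $\Mcal^{\Ecal}_{[s_0,s_1]}$ the claim reduces to showing $|\del u|+|u|_1\lesssim\delta^{-1}C_1\eps$, which follows from \eqref{eq4-03-oct-2025} and \eqref{eq7-03-oct-2025} using $r\gtrsim 1$, $\la r-t\ra\gtrsim 1$, and the interpolation $s^{\delta}\lesssim r^{\delta/2}$ valid on $\Mcal^{\ME}$ (since $s^2\lesssim r$ there). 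In $\Mcal^{\Mcal}_{[s_0,s_1]}$, where $\la r-t\ra\sim 1$ and $r\sim s^2$, the Sobolev bound \eqref{eq5-03-oct-2025} yields $\zeta^{-1}|h|_1\lesssim \delta^{-1}C_1\eps\,s^{-1+\delta}\lesssim\delta^{-1}C_1\eps\,\zeta^{-\delta/2}$, and one treats $\zeta^{-2}|\del h|$ analogously by tracking the matching between the $s^{\delta}$ growth of the bootstrap and the required exponent $\zeta^{-\delta/2}\sim s^{\delta/2}$, using the smallness of $\delta$ relative to $\kappa-1/2$.

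The main obstacle is the third summand $\frac{\zeta^{-2}|h^{\Ncal00}|_1}{1+\zeta^{-2}|h^{\Ncal00}|}$, and this is where the light-bending bootstrap assumption \eqref{eq6-02-oct-2025} is essential. The strategy is a dichotomy on the size of $\zeta^{-2}|h^{\Ncal00}|$. When $\zeta^{-2}|h^{\Ncal00}|\le 1$, the ratio is bounded by the numerator $\zeta^{-2}|h^{\Ncal00}|_1$ and is controlled exactly as the term $\zeta^{-2}|\del h|$. When $\zeta^{-2}|h^{\Ncal00}|>1$, the ratio collapses to $|h^{\Ncal00}|_1/|h^{\Ncal00}|$; here the pointwise lower bound $|h^{\Ncal00}|\ge \tfrac{1}{3}C_L\eps\,s^{-1+\theta}$ supplied by \eqref{eq6-02-oct-2025} on $\{r\ge 3t/4\}$ matches against $|h^{\Ncal00}|_1\lesssim \delta^{-1}C_1\eps\,r^{-1}\la r-t\ra^{1-\kappa}s^{\delta}$ to give a bound of the schematic form $\delta^{-1}(C_1/C_L)\,r^{-1}\la r-t\ra^{1-\kappa}s^{2-\theta+\delta}$, which is absorbed into $\delta^{-1}C_1\eps\,\zeta^{-\delta/2}$ by choosing $C_L$ large (as fixed in Section~\ref{subsect1-02-oct-2025}) and using $\theta\ll\delta$. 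The delicate point throughout is the exponent bookkeeping in the near-light-cone face of $\Mcal^{\Mcal}_{[s_0,s_1]}$, where the naive identity $s^{\delta}\sim\zeta^{-\delta}$ would only produce a $\zeta^{-\delta}$ bound; the refinement to $\zeta^{-\delta/2}$ requires simultaneously exploiting the $r\la r-t\ra^{\kappa}$-weighted Sobolev, the constraint $s^{2}\lesssim r$, and the largeness of $C_L$ to close the ratio estimate.
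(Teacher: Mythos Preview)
Your decomposition $h=u+\Hreff$ and the regional split $\Mcal^{\ME}=\Mcal^{\Mcal}\cup\Mcal^{\Ecal}$ match the paper's structure, and your treatment of the first two terms is essentially correct. The paper streamlines this by working directly with the combined bounds \eqref{eq4-03-oct-2025} and \eqref{eq7-03-oct-2025} on $h$ (rather than splitting into $u$ and $\Hreff$), and invokes the key geometric relation $\la r-t\ra/r\lesssim \zeta^2$ in $\Mcal^{\Mcal}_{[s_0,s_1]}$ (their \eqref{eq6-03-oct-2025}) to convert $r^{-1}\la r-t\ra^{-\kappa}$ and $r^{-1}\la r-t\ra^{1-\kappa}$ weights into powers of $\zeta$.

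The main divergence is your treatment of the third summand. You identify it as the ``main obstacle'' and propose a dichotomy on the size of $\zeta^{-2}|h^{\Ncal00}|$, invoking the light-bending lower bound \eqref{eq6-02-oct-2025} in the regime $\zeta^{-2}|h^{\Ncal00}|>1$. This is unnecessary: the paper simply observes that since the denominator $1+\zeta^{-2}|h^{\Ncal00}|\ge 1$, one has
\[
\frac{\zeta^{-2}|h^{\Ncal00}|_1}{1+\zeta^{-2}|h^{\Ncal00}|}\le \zeta^{-2}|h^{\Ncal00}|_1\lesssim \zeta^{-2}|h|_1,
\]
and then bounds $\zeta^{-2}|h|_1$ via \eqref{eq7-03-oct-2025} and \eqref{eq6-03-oct-2025} exactly as for the first term. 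No dichotomy, and the light-bending condition plays no role in this lemma.

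On the exponent: you are right to flag the passage from the $s^{\delta}$ bound to $\zeta^{-\delta/2}$ as delicate. Your proposed fix via largeness of $C_L$ cannot help for the first two terms, where $C_L$ does not appear. The paper's written chain $\lesssim s^{\delta}\lesssim \zeta^{-\delta/2}$ in $\Mcal^{\Mcal}$ relies on $s\lesssim \zeta^{-1/2}$, whereas only $s^{-1}\lesssim\zeta$ (i.e.\ $s\lesssim\zeta^{-1}$) is available there, so the argument as written naturally yields the exponent $\zeta^{-\delta}$ rather than $\zeta^{-\delta/2}$.
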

\begin{proof}
We recall \cite[Lemma~3.4]{PLF-YM-PDE}:
\begin{equation}\label{eq6-03-oct-2025}
\frac{\la r-t\ra}{r}\lesssim \zeta^2,\quad \Mcal^{\Mcal}_{[s_0,s_1]}.
\end{equation}
Thus \eqref{eq4-03-oct-2025} together with \eqref{eq7-03-oct-2025} leads us to 
$$
\zeta^{-2}|\del h| + \zeta^{-1}|h|\lesssim \delta^{-1}C_1\eps s^{\delta}\lesssim \delta^{-1}C_1\eps \zeta^{- \delta/2},\quad \text{ in }  \Mcal^{\Mcal}_{[s_0,s_1]}.
$$
For the third term in the left-hand side of \eqref{eq8-03-oct-2025}, we remark that, thanks to \eqref{eq7-03-oct-2025}
$$
|h^{\Ncal00}|\lesssim \delta^{-1}C_1\eps s^{\delta}.
$$
Thus
$$
\zeta^{-1}|h|_1\lesssim 
+ \frac{\zeta^{-2}|h^{\Ncal00}|_1}{1+\zeta^{-2}|h^{\Ncal00}|}
\lesssim \delta^{-1}C_1\eps s^{\delta}
\lesssim \delta^{-1}C_1\eps \zeta^{- \delta/2},\quad \text{ in }  \Mcal^{\Mcal}_{[s_0,s_1]}.
$$

The estimate is trivial in $\Mcal^{\Ecal}_{[s_0,s_1]}$, because $\zeta\geq 1$ there.
\end{proof}

Thanks to Lemma~\ref{lem1-03-oct-2025}, we are able to apply the spinorial Sobolev inequalities \eqref{eq10-03-oct-2025}and \eqref{eq2-19-aout-2025}. We obtain
\begin{equation}\label{eq10-04-oct-2025}
[\Psi]_{p} + [\del\Psi]_p\lesssim_{\delta} 
\begin{cases}
C_1\eps\zeta^{1/2- \delta}\la r-t\ra^{- \mu}r^{-1}s^{1+\delta},\quad &p\leq N-3,
\\
C_1\eps\zeta^{1/2- \delta}\la r-t\ra^{- \mu}r^{-1}s^{\delta},\quad &p\leq N-8.
\end{cases}
\end{equation}
Also, by \eqref{eq3-19-aout-2025}
\begin{equation}\label{eq5-04-oct-2025}
[\Psi]_p\lesssim_{\delta}
\begin{cases}
C_1\eps \zeta^{-1/2- \delta}\la r-t\ra^{1- \mu}r^{-2}s^{1+2\delta},\quad & p\leq N-3,
\\
C_1\eps \zeta^{-1/2- \delta}\la r-t\ra^{1- \mu}r^{-2}s^{2\delta},\quad& p\leq N-8.
\end{cases}
\end{equation}
Then by \eqref{eq7-04-oct-2025}, we obtain
\begin{equation}\label{eq6-04-oct-2025}
[\Psi]_p\lesssim_{\delta}
\begin{cases}
C_1\eps \zeta^{1/2- \delta}\la r-t\ra^{1- \mu}r^{-2}s^{1+2\delta},\quad & p\leq N-4,
\\
C_1\eps \zeta^{1/2- \delta}\la r-t\ra^{1- \mu}r^{-2}s^{2\delta},\quad& p\leq N-9.
\end{cases}
\end{equation}

\subsection{Estimates on new terms}
\label{subsec1-05-oct-2025}
We are ready to bound the ``new terms'' respect to \cite{PLF-YM-PDE}. Then are:
\begin{equation}
\aligned
\mathbb{S}(g,\del g,\Psi,\widehat{\del}\Psi)_{\alpha\beta},
\quad
\big(\mathbb{G}(g,\del g,\Gamma,\del\Gamma)_{\alpha\beta} - \mathbb{G}(\gref,\del \gref,\Gamma_{\mathrm{r}},\del\Gamma_{\mathrm{r}})_{\alpha\beta}\big).
\endaligned
\end{equation}
We need to give the pointwise and $L^2$ bounds on these terms. In order to not disturb the argument in \cite{PLF-YM-PDE}, we need to show that these terms enjoy integrable $L^2$ estimates, and their pointwise bounds should give $\la t+r\ra^{-1}$ in the Kirchhoff's formula, i.e., the ``sub-critical'' case of \cite[Proposition~8.1]{PLF-YM-PDE}. Terms satisfying the above two properties will be called {\sl trivial terms} in the following discussion.
\begin{proposition}\label{prop1-04-oct-2025}
In $\Mcal^{\ME}_{[s_0,s_1]}$, the following estimates hold: 
\begin{equation}
\big|\big(\mathbb{G}(g,\del g,\Gamma,\del\Gamma)_{\alpha\beta} - \mathbb{G}(\gref,\del \gref,\Gamma_{\mathrm{r}},\del\Gamma_{\mathrm{r}})_{\alpha\beta}\big)\big|_{N-3}\lesssim_{\delta}(C_1\eps)^2\la r-t\ra^{- \kappa}r^{-3+2\delta},
\end{equation}
\begin{equation}
\big\|s\zeta\la r-t\ra^{\kappa}\big|\mathbb{G}(g,\del g,\Gamma,\del\Gamma)_{\alpha\beta} - \mathbb{G}(\gref,\del \gref,\Gamma_{\mathrm{r}},\del\Gamma_{\mathrm{r}})_{\alpha\beta}\big|_N\big\|_{L^2(\Mcal^{\ME}_s)}
\lesssim_{\delta} (C_1\eps)^2s^{-3+4\delta}.
\end{equation}
\end{proposition}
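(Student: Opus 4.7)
The strategy is to apply the general cubic/quasilinear bound provided by Proposition~\ref{lem4-24-sept-2025}, which controls the difference $\mathbb{G}(g,\del g,\Gamma,\del\Gamma)_{\alpha\beta}-\mathbb{G}(\gref,\del\gref,\Gamma_{\mathrm{r}},\del\Gamma_{\mathrm{r}})_{\alpha\beta}$ by a linear combination of terms of the form $\Kref\eps(t+r+1)^{-2+\theta}\la r-t\ra^{-1}|u|_{p,k}$, $\Kref\eps(t+r+1)^{-2+\theta}|\del u|_{p,k}$, and a cubic piece with prefactor $\Kref\eps(t+r+1)^{-1+(N+1)\theta}\la r-t\ra^{-1}$. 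The hypotheses of Proposition~\ref{lem4-24-sept-2025} are verified by the bootstrap: the generalized wave-coordinate condition is precisely \eqref{eq11-04-oct-2025}, and the smallness $|u|_{[N/2]+1}\leq 1/2$ follows from the pointwise bound \eqref{eq5-03-oct-2025} once $\eps$ is sufficiently small. Throughout, I will exploit two geometric identities valid in $\Mcal^{\ME}_{[s_0,s_1]}$: $t+r\simeq r$ (since $t\lesssim r$ whenever $r\gtrsim s^{2}$) and $s^{\delta}\lesssim r^{\delta/2}$, both consequences of $s^{2}\lesssim 1+r$ in the merging-Euclidean region.

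For the pointwise estimate at order $N-3$, I substitute the bounds \eqref{eq3-03-oct-2025} on $|\del u|_{N-3}$ and \eqref{eq7-03-oct-2025} on $|u|_{N-3}$ into the right-hand side of Proposition~\ref{lem4-24-sept-2025} with $p=N-3$, and use $\Kref\lesssim C_{1}$ from \eqref{eq2-03-oct-2025}. Each of the two linear contributions then reduces to
\[
(C_{1}\eps)^{2}\,\la r-t\ra^{-\kappa}\,r^{-3+\theta+\delta/2},
\]
which sits inside the target $(C_{1}\eps)^{2}\,r^{-3+2\delta}\la r-t\ra^{-\kappa}$ thanks to $\theta\ll\delta$ (cf.~\eqref{eq4-15-oct-2025}). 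The cubic term, estimated by the same pointwise bounds, inherits an additional factor $C_{1}\eps$ and is therefore strictly subleading.

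For the $L^{2}$ bound at top order $N$, I would split each product $|\del u|_{p_{1}}|u|_{p_{2}}$ with $p_{1}+p_{2}\leq N$ by placing the lower-index factor (necessarily of order $\leq [N/2]\leq N-3$) in $L^{\infty}$ via the pointwise bounds above, and the higher-index factor in $L^{2}$ using either the energy bound \eqref{eq1-02-oct-2025-H} (for $|\del u|_{N}$ against the weight $\zeta\la r-t\ra^{\kappa}$) or the weighted Hardy--Poincar\'e inequality of Proposition~\ref{eq3-15-05-2020} (for $|u|_{N}$ against the weight $\la r-t\ra^{-1+\kappa}$). Using the uniform bound $\sup_{\Mcal^{\ME}_{s}}(t+r+1)^{-2+\theta}\lesssim s^{-4+2\theta}$, each resulting integral is controlled by $(C_{1}\eps)^{2}s^{-3+\delta+2\theta}$ for the two linear pieces and by $\delta^{-1}(C_{1}\eps)^{3}s^{-3+2\delta+2(N+1)\theta}$ for the cubic piece, both comfortably inside $(C_{1}\eps)^{2}s^{-3+4\delta}$ under the calibration $(N+1)\theta\ll\delta$ already imposed in \eqref{eq4-15-oct-2025}.

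The only delicate point---and the step I expect to require the most care---is the cubic contribution, whose weight $(t+r+1)^{-1+(N+1)\theta}\la r-t\ra^{-1}$ carries the weakest decay in $t+r$. Here one recovers integrability by noticing that the pointwise bound on the low-order factor supplies a second copy of $r^{-1}$ (yielding a net $r^{-2+(N+1)\theta}$ prefactor), and then arranges the remaining $\la r-t\ra$ weights so that the high-order factor can be inserted into the energy: either one absorbs a $\la r-t\ra^{\kappa}$ using the $\la r-t\ra^{1-\kappa}$ present in \eqref{eq7-03-oct-2025} when $p_{2}\leq N-3$, or one uses $\la r-t\ra^{-1}\leq\la r-t\ra^{-\kappa}$ together with the weighted energy for $|\del u|_{N}$ when the top-order derivative falls on $u$. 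The $s^{-4+2\theta}$ gain from $r\gtrsim s^{2}$ then produces the final $s^{-3+4\delta}$ rate.
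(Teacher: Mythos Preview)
Your proposal is correct and follows essentially the same approach as the paper: apply Proposition~\ref{lem4-24-sept-2025}, substitute the pointwise bounds \eqref{eq3-03-oct-2025} and \eqref{eq5-03-oct-2025} for the pointwise estimate, and invoke the Hardy--Poincar\'e inequality (Proposition~\ref{eq3-15-05-2020}) together with the energy bound for the $L^{2}$ estimate. One minor slip: you cite \eqref{eq7-03-oct-2025} for $|u|_{N-3}$, but that equation is for $|h|$; the correct reference is \eqref{eq5-03-oct-2025}, which has the same form.
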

\begin{proof}
This is by Lemma~\ref{lem4-24-sept-2025}. For the pointwise bound, we substitute \eqref{eq3-03-oct-2025} and \eqref{eq5-03-oct-2025} into \eqref{eq1-04-oct-2025}, and obtain the desired estimate.

For the $L^2$ estimate, we need to apply Proposition~\ref{eq3-15-05-2020}.
\end{proof}

%---------------------------------------------------

We now turn our attention to the stress-energy tensor of $\Psi$, which is the interaction from $\Psi$ to the metric $g$. Recalling the decomposition \eqref{eq11-03-oct-2025}, we will establish the following estimate.
\begin{proposition}\label{prop1-03-oct-2025}
Under the bootstrap assumptions and let $N\geq 19$, the following estimates hold:
\begin{equation}
|\mathbb{S}(g,\del g,\Psi,\widehat{\del}\Psi)_{\alpha\beta}|_{p,k}
\lesssim (C_1\eps)^2\la r-t\ra^{-1+(1/4-3\delta)}r^{-2-(1/4-2\delta)},
\end{equation}
\begin{equation}
\big\|s\zeta\la r-t\ra^{\kappa}\big|\mathbb{S}(g,\del g,\Psi,\widehat{\del}\Psi)_{\alpha\beta}\big|_N\big\|_{L^2(\Mcal^{\ME}_s)}
\lesssim_{\delta} (C_1\eps)^2s^{-1- \delta}.
\end{equation}
\end{proposition}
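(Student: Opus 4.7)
\medskip

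\noindent\textbf{Proof proposal for Proposition~\ref{prop1-03-oct-2025}.}

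The plan is to split $\mathbb{S}(g,\del g,\Psi,\widehat{\del}\Psi)_{\alpha\beta}$ as given in~\eqref{eq11-03-oct-2025} into three types of contributions: (i) the \emph{bilinear Dirac-form terms} of the shape $\la\Psi,\del_\mu\cdot\widehat{\del_\nu}\Psi\ra_{\ourD}$ (and their conjugates), (ii) the \emph{mass term} $M g_{\alpha\beta}\la\Psi,\Psi\ra_{\ourD}$, and (iii) the \emph{cubic metric-spinor terms} involving $g^{\gamma\beta}\del g\,\la\Psi,\del_\gamma\cdot\Psi\ra_{\ourD}$. For each type, I would apply the Cauchy–Schwarz inequality and pointwise bound for Clifford-multiplied spinors recorded in Lemma~\ref{lem1-06-april-2025}, Corollary~\ref{cor1-18-aout-2025}, and Proposition~\ref{prop1-15-july-2025} (which controls $[\,\cdot\,]_{p,k}$ of products using the high-order product estimates). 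Together with Lemma~\ref{lem2-13-july-2025}, which gives $|\del_\mu|_{\vec n}\lesssim \zetab^{-1}$, this reduces every expression to a finite product of quantities of the form $[\Psi]_{p_i,k_i}$, $[\del\Psi]_{p_i,k_i}$, $|H|_{p_i,k_i}$, multiplied by a power of $\zetab^{-1}$ (hence by a power of $\zeta^{-1}$).

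For the pointwise bound, I would feed into this scheme the Sobolev decay estimates \eqref{eq10-04-oct-2025}, \eqref{eq5-04-oct-2025}, and especially the improved bound \eqref{eq6-04-oct-2025}, assigning the higher-norm Sobolev estimate to at most one factor and the lower one (with the smaller power of $s$) to the remaining factors. The key mechanism for converting the unwanted powers of $s$ into acceptable $\la r-t\ra$ and $r$ weights is the elementary identity $s^2=(t-r)(t+r)\lesssim \la r-t\ra\, r$ valid in $\Mcal^{\ME}$ (see \eqref{eq6-03-oct-2025}), together with $\la r-t\ra\lesssim r$ in the same region. The mass term will be the driving one: $M\,g_{\alpha\beta}\,|\la\Psi,\Psi\ra_{\ourD}|\lesssim [\Psi]_p^2$ and inserting \eqref{eq6-04-oct-2025} then using $s^{2+4\delta}\lesssim \la r-t\ra^{1+2\delta}r^{1+2\delta}$ yields exactly $(C_1\eps)^2\la r-t\ra^{-3/4-3\delta}r^{-9/4+2\delta}$ after choosing $\mu\in(3/4,9/10)$ close enough to $\frac{9}{10}$ and $\delta$ small. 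The bilinear Dirac terms lose one factor of $\zetab^{-1}$ but gain a derivative that is already well controlled by \eqref{eq10-04-oct-2025}, and a short calculation (using $\zetab^{-1}\lesssim\zeta^{-1}$) shows they are strictly better than the mass term. The cubic $\del g$-terms are trivially better in view of the pointwise metric decay \eqref{eq3-03-oct-2025} and \eqref{eq4-03-oct-2025}.

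For the weighted $L^2$ bound, the plan is to expand $\mathscr{Z}^I\mathbb{S}$ by the Leibniz-type rule for $Z$–adapted derivatives (see \eqref{eq4-04-july-2025} and the subsequent product calculus) and assign the top-order factor to $L^2$ and all other factors to $L^\infty$. The $L^2$ norm of the top factor is controlled by the spinorial energy through \eqref{eq4-19-aout-2025}, while the remaining $L^\infty$ factors are estimated by \eqref{eq10-04-oct-2025}–\eqref{eq6-04-oct-2025}. The integration against the weight $s\,\zeta\,\la r-t\ra^{\kappa}$ is then handled as in \cite{PLF-YM-PDE}: the product $\zeta\la r-t\ra^{\kappa}$ is absorbed by the $L^2$ energy norm of the top factor while the low factors provide the decay $s^{-1-\delta}$. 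Here $\mu>\tfrac{3}{4}$ and the strict gap $\mu-3/4>10\delta$ give the margin required to get $s^{-1-\delta}$ rather than merely $s^{-1}$; the $\kappa<3/4$ is likewise sufficient for the spatial weight.

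\medskip
\noindent\textbf{Main obstacle.} The delicate part is the mass term $M g_{\alpha\beta}\la\Psi,\Psi\ra_{\ourD}$, because the naive bound $[\Psi]^2$ is not pointwise integrable when one only uses \eqref{eq5-04-oct-2025} (it gives $\zeta^{-1-2\delta}\la r-t\ra^{2-2\mu}$). The proof genuinely requires the improved pointwise estimate \eqref{eq6-04-oct-2025}, which trades the $\zeta^{-1/2-\delta}$ loss for a favorable $\zeta^{1/2-\delta}$ gain at the cost of an extra $\la r-t\ra$; and it then needs the conversion $s^{2+4\delta}\lesssim \la r-t\ra^{1+2\delta}r^{1+2\delta}$ to turn the $s^{1+2\delta}$ factor into powers of $\la r-t\ra$ and $r$ so that the overall exponents on $\la r-t\ra$ and $r$ match the target $(-3/4-3\delta,\,-9/4+2\delta)$. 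Getting these exponents simultaneously under the constraints $\kappa\in(1/2,3/4)$, $\mu\in(3/4,9/10)$, and $\delta\ll \min\{\kappa-1/2,\mu-3/4\}$ is the main bookkeeping difficulty; once it is carried out for the mass term, the other contributions of $\mathbb{S}$ yield strictly better bounds by the same mechanism.
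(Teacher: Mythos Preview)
Your plan is essentially the paper's own: decompose $\mathbb{S}$ into the three pieces $T_1=\la\Psi,\del_\mu\!\cdot\!\widehat{\del_\nu}\Psi\ra_{\ourD}$, $T_2=g_{\alpha\beta}\la\Psi,\Psi\ra_{\ourD}$, $T_3=g^{\gamma\beta}\del_\beta g_{\mu\nu}\la\Psi,\del_\gamma\!\cdot\!\Psi\ra_{\ourD}$, reduce to bilinear spinor norms via the Clifford product estimates, and then feed in the Sobolev decay \eqref{eq10-04-oct-2025}--\eqref{eq6-04-oct-2025} for the pointwise bound and the energy \eqref{eq1-02-oct-2025-H} for the $L^2$ bound. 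The paper packages the resulting quadratic estimates as Lemma~\ref{lem1-04-oct-2025} and the cubic ones as Lemma~\ref{lem2-04-oct-2025}; your ``main obstacle'' is exactly what those two lemmas isolate, and your use of the improved bound \eqref{eq6-04-oct-2025} for one factor is precisely the key ingredient.

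One point you glide over: the ``Leibniz-type rule'' \eqref{eq4-04-july-2025} covers $Z(f\Psi)$ and $Z(X\cdot\Psi)$, but \emph{not} $Z\bigl(\la\Phi,\Psi\ra_{\ourD}\bigr)$. For the Dirac form one must use \eqref{eq3-06-june-2025}, which carries the extra term $-\tfrac12\,\mathrm{div}(Z)\,\la\Phi,\Psi\ra_{\ourD}$. Iterating this is what the paper does in \eqref{eq2-28-sept-2025}, and it produces the high-order corrections $\HDirac_{p,k}[\del H,\Phi,\Psi]$ that are then bounded in Lemma~\ref{lem2-04-oct-2025}. Since $|\mathrm{div}(Z)|_{p,k}\lesssim (t+r)|\del H|_{p,k}$ by \eqref{eq1-28-sept-2025}, these corrections are cubic and strictly easier than your main quadratic terms, so the omission does not break your argument---but you should flag it, since Proposition~\ref{prop1-15-july-2025} and Corollary~\ref{cor1-18-aout-2025} alone do not close the high-order commutation for the scalar $\la\Phi,\Psi\ra_{\ourD}$.

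A minor remark on your numerology: you do not need $\mu$ close to $9/10$. The paper's Lemma~\ref{lem1-04-oct-2025} uses one factor of \eqref{eq10-04-oct-2025} and one of \eqref{eq6-04-oct-2025} (rather than both copies of \eqref{eq6-04-oct-2025}), and the constraint $\mu>3/4$ already suffices to make the $\la r-t\ra$ exponent sub-critical for Kirchhoff.
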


These are parallel to the \cite[Lemma~12.4 and Eq. (12.42)]{PLF-YM-PDE}. The proof is composed by several steps. We first deal with the estimate for the Dirac form $\la \Phi,\Psi \ra_{\ourD}$.

\begin{lemma}
Assume that \eqref{eq-China-condition} holds in $\Mcal_{[s_0,s_1]}$. Then one has 
\begin{equation}\label{eq1-28-sept-2025}
|\mathrm{div}(Z)|_{p,k}\lesssim_N  (t+r)|\del H|_{p,k}
\end{equation}
\begin{equation}\label{eq2-28-sept-2025}
\aligned
\, [\mathscr{Z}^I\la\Phi,\Psi\ra_{\ourD}]_{p,k}\lesssim_N& 
\sum_{p_a+p_b\leq p\atop k_a+k_b\leq k}[\Phi]_{p_a,k_a}[\Psi]_{p_b,k_b} 
+ \HDirac_{p,k}[\del H,\Phi,\Psi]
\endaligned
\end{equation}
with
\be
\HDirac_{p,k}[\del H,\Phi,\Psi]
\lesssim_N\sum_{j=1}^p(t+r)^j\hspace{-0.8cm}\sum_{p_1+\cdots+p_j+p_a+p_b\leq p\atop k_1+\cdots+k_j+k_a+k_b\leq k}\prod_{l=1}^j|\del H|_{p_l,k_l}[\Phi]_{p_a,k_a}[\Psi]_{p_b,k_b}.
\ee
\end{lemma}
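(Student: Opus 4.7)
The first estimate \eqref{eq1-28-sept-2025} will be established by a direct computation in coordinates. Writing $\mathrm{div}(Z) = g^{\alpha\gamma}(\del_{\alpha},\nabla_{\gamma}Z)_g$ and decomposing $g = \eta + H$, I will exploit the fact that each admissible vector field $Z\in\mathscr{Z}$ is Killing (or, for $Z=\del_\mu$, translation-invariant) for the Minkowski metric, so that $\mathrm{div}_{\eta}(Z)=0$. Hence the entire divergence arises from the perturbation $H$, and because the components $Z^{\alpha}$ of boosts and rotations are linear in $(t,x)$ while those of translations are constant, we obtain the schematic bound $|\mathrm{div}(Z)|\lesssim (t+r)|\del H|$. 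Differentiating by admissible operators $\mathscr{Z}^I$ and redistributing derivatives via the Leibniz rule (using \eqref{eq4-11-july-2025} to absorb the homogeneous-of-degree-zero factors like $x^a/t$ produced when $L_b$ or $\Omega_{cd}$ hits the coefficients of $Z$) yields \eqref{eq1-28-sept-2025}.

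For \eqref{eq2-28-sept-2025}, the plan is to proceed by induction on $\ord(I)$, using Lemma~\ref{lemma-19juillet2025-a} as the recursion step. When $|I|=0$ the claim is trivial, and when $|I|=1$ the identity
\[
Z\,\la\Phi,\Psi\ra_{\ourD}
= \la\widehat{Z}\Phi,\Psi\ra_{\ourD}
+ \la\Phi,\widehat{Z}\Psi\ra_{\ourD}
- \tfrac12\,\mathrm{div}(Z)\,\la\Phi,\Psi\ra_{\ourD},
\]
combined with \eqref{eq1-28-sept-2025} and Corollary~\ref{cor1-18-aout-2025}, gives exactly the quadratic plus cubic structure of the right-hand side of \eqref{eq2-28-sept-2025}. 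For the inductive step, writing $\mathscr{Z}^{I'} = Z\,\mathscr{Z}^I$ and applying the identity once more converts $\mathscr{Z}^I(\la\Phi,\Psi\ra_{\ourD})$ into a quadratic pairing plus a $\mathrm{div}(Z)$-weighted scalar multiple of lower-order pairings; the inductive hypothesis handles both contributions, and the $(t+r)^j$ powers accumulate cleanly by counting how many of the $Z$'s produce a divergence factor.

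The main technical point will be the combinatorial bookkeeping of the cubic remainder $\HDirac_{p,k}$: each time one of the $p$ admissible derivatives is absorbed into a $\mathrm{div}$ factor (rather than Leibniz-distributed to $\Phi$ or $\Psi$), it contributes one extra $(t+r)|\del H|$ with $(p_l,k_l)$-order bookkeeping, while the remaining derivatives split among $\Phi,\Psi$ and the already-formed $\mathrm{div}$ factors through admissible partitions $I=I_1\odot\cdots\odot I_j\odot J_\Phi\odot J_\Psi$. I will track this by inducting jointly on $(\ord(I),\rank(I))$ so that each branch of the induction matches one of the summands in the definition of $\HDirac_{p,k}$; the only delicate point is verifying that, when $Z$ is a boost or rotation and lands on a factor $(t+r)^j$ already present in $\HDirac$, the resulting homogeneous-of-degree-zero coefficient is absorbed without changing the total power, which is where \eqref{eq4-11-july-2025} is used again. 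No new analytic input beyond Lemma~\ref{lemma-19juillet2025-a} and \eqref{eq1-28-sept-2025} is required, so the argument is purely algebraic once the inductive scheme is set up.
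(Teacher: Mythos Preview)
Your proposal is correct and follows essentially the same approach as the paper. For \eqref{eq1-28-sept-2025} the paper writes $\mathrm{div}(Z)=\del_\alpha Z^\alpha + Z^\gamma\Gamma_{\gamma\alpha}^\alpha$ directly, observes $\del_\alpha Z^\alpha=0$ for every admissible $Z$, and then uses $|\Gamma_{\alpha\beta}^{\gamma}|_{p,k}\lesssim_N|\del H|_{p,k}$ together with $|Z^\gamma|\lesssim(t+r)$; this is a slightly more direct route than decomposing $g=\eta+H$ in your formula, but the content is identical. For \eqref{eq2-28-sept-2025} the paper does exactly your induction on $\ord(I)$ via Lemma~\ref{lemma-19juillet2025-a}, phrased as an algebraic decomposition of $\mathscr{Z}^I\la\Phi,\Psi\ra_{\ourD}$ into terms $\la\mathscr{Z}^J\Phi,\mathscr{Z}^K\Psi\ra_{\ourD}$ and $\prod_l\mathscr{Z}^{I_l}(\mathrm{div}(Z_{\imath_l}))\la\mathscr{Z}^J\Phi,\mathscr{Z}^K\Psi\ra_{\ourD}$, followed by Corollary~\ref{cor1-18-aout-2025}.
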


\begin{proof}
\bse
Recall that
\begin{equation}
\mathrm{div}(Z) = \del_{\alpha}Z^{\alpha} + Z^{\gamma}\Gamma_{\gamma\alpha}^{\alpha}.
\end{equation}
We remark that $\del_{\alpha}Z^{\beta}$ are constants. Furthermore, when $Z = \del_{\mu}$, $Z^{\alpha}$ are constants. When $Z = L_a$ or $\Omega_{ab}$, an explicit calculation shows that
\be
\del_{\alpha}Z^{\alpha} = 0.
\ee
Then thanks to \eqref{eq-China-condition} and Lemma~\ref{lem1-11-july-2025},
\begin{equation}
|\Gamma_{\alpha\beta}^{\gamma}|_{p,k}\lesssim_N |\del H|_{p,k}.
\end{equation}
Thus for $p\leq N$,
\be
\big|\mathscr{Z}^I\big(\mathrm{div}(Z)\big)\big|\lesssim_N (t+r)|\del H|_{p,k}.
\ee
This leads us to \eqref{eq1-28-sept-2025}.
\ese

For \eqref{eq2-28-sept-2025}, we establish the decomposition
\begin{subequations}
\begin{equation}
\mathscr{Z}^I(\la\Phi,\Psi\ra_{\ourD}) \simeq 
\la\mathscr{Z}^J\Phi,\mathscr{Z}^K\Psi \ra_{\ourD},
\quad 
\prod_{l=1}^j\mathscr{Z}^{I_l}\big(\mathrm{div}(Z_{\imath_l})\big)\la\mathscr{Z}^J\Phi,\mathscr{Z}^K\Psi\ra_{\ourD},
\end{equation}
with $1\leq j\leq p$ and
\begin{equation}
\sum_{l=1}^j\ord(I_l) + \ord(J) + \ord(K) \leq p,
\quad
\sum_{l=1}^j\rank(I_l) + \rank(J) + \rank(K) \leq k. 
\end{equation}
\end{subequations}
where $\simeq$ means the left-hand side is a finite linear combination of the right-hand side with constant coefficients. Thanks to Lemma~\ref{lemma-19juillet2025-a}, this can be easily checked by induction on $p=\ord(I)$.
\end{proof}

\begin{lemma}[Estimates on quadratic terms]\label{lem1-04-oct-2025}
Under the bootstrap assumptions and suppose that $N\geq 1p$, the following estimates hold:
\begin{equation}\label{eq2-04-oct-2025}
\sum_{p_1+p_2\leq N-3}\zeta^{-1}[\Psi]_{p_1}\big([\Psi]_{p_2} + [\del\Psi]_{p_2}\big)
\lesssim_{\delta} (C_1\eps)^2\la r-t\ra^{-1/2-3\delta}r^{-5/2+2\delta},
\end{equation}
\begin{equation}\label{eq9-04-oct-2025}
\aligned
\sum_{p_1+p_2\leq N}\!\!
\big\|s\zeta\la r-t\ra^{\kappa}\cdot \zeta^{-1}\cdot[\Psi]_{p_1}[\Psi]_{p_2}\big\|_{L^2(\Mcal^{\ME}_s)}  
\lesssim_\delta& (C_1\eps)^2s^{-3/2},
\\
\sum_{p_1+p_2\leq N}\!\!
\big\|s\zeta\la r-t\ra^{\kappa}\cdot\zeta^{-1}\cdot[\Psi]_{p_1}[\del\Psi]_{p_2}\big\|_{L^2(\Mcal^{\ME}_s)}
\lesssim_\delta& (C_1\eps)^2s^{-3/2}.
\endaligned
\end{equation}
\end{lemma}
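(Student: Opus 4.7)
The proof hinges on combining the refined pointwise Sobolev inequality~\eqref{eq6-04-oct-2025}---which exploits the positivity of the Dirac mass $M>0$ through the Klein--Gordon-type estimate of Proposition~\ref{prop1-17-aout-2025}---with the energy bounds~\eqref{eq1-02-oct-2025-H}--\eqref{eq1-02-oct-2025-L}, while repeatedly trading weights using the two scalar inequalities $\la r-t\ra/r \lesssim \zeta^2$ in $\Mcal^{\Mcal}$ (cf.~\eqref{eq6-03-aout-2025}; trivially in $\Mcal^{\Ecal}$) and $s^2 \lesssim r$ throughout $\Mcal^{\ME}_{[s_0,s_1]}$.

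For the pointwise estimate~\eqref{eq2-04-oct-2025}, I would split $p_1+p_2 \leq N-3$ so that (WLOG) $p_2 \leq (N-3)/2 \leq N-9$, then apply~\eqref{eq6-04-oct-2025} to $[\Psi]_{p_1}$ with the $s^{1+2\delta}$ weight and to $[\Psi]_{p_2}$ with the $s^{2\delta}$ weight, obtaining
\begin{equation*}
\zeta^{-1}[\Psi]_{p_1}[\Psi]_{p_2} \lesssim (C_1\eps)^2 \, \zeta^{-2\delta}\,\la r-t\ra^{2-2\mu}\,r^{-4}\,s^{1+4\delta}.
\end{equation*}
Substituting $\zeta^{-2\delta} \lesssim (r/\la r-t\ra)^\delta$ and $s^{1+4\delta} \lesssim r^{1/2+2\delta}$, then splitting $\la r-t\ra^{2-2\mu-\delta} = \la r-t\ra^{-1/2-3\delta}\,\la r-t\ra^{5/2-2\mu+2\delta}$ with the latter exponent bounded by $1-18\delta$ thanks to $\mu > 3/4 + 10\delta$ (from Subsection~\ref{subsec1-02-oct-2025}) and absorbing it via $\la r-t\ra \lesssim r$, I reach $\lesssim (C_1\eps)^2 \la r-t\ra^{-1/2-3\delta}\,r^{-5/2-15\delta}$, which sits below the target. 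The corner case $p_1 = N-3$ would be handled analogously using~\eqref{eq5-04-oct-2025} in place of~\eqref{eq6-04-oct-2025}, the extra $\zeta^{-1}$ being absorbed by the same weight-trading argument. For the $[\del\Psi]_{p_2}$ factor, the commuted estimate $[\del\Psi]_p \lesssim \zetab\,[\Psi]_{p+1}$ from Proposition~\ref{prop1-24-july-2025}, combined with $\zetab \lesssim 1$, reduces the analysis to the same computation.

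For the $L^2$ estimate~\eqref{eq9-04-oct-2025} with $p_1+p_2 \leq N$ (and $N \geq 20$), I would order the indices so that $p_2 \leq N/2 \leq N-10$, estimate the low-order factor pointwise via~\eqref{eq6-04-oct-2025} (directly for $[\Psi]_{p_2}$, or through $\zetab[\Psi]_{p_2+1}$ for $[\del\Psi]_{p_2}$), and control the remaining high-order factor in $L^2$ via $\|\omega^\mu\zetab^{1/2}[\Psi]_{p_1}\|_{L^2} \lesssim \Ebf_\mu^{\ME,N}(s,\Psi)^{1/2} \leq C_1\eps s^\delta$ (and similarly for $[\del\Psi]_{p_1}$). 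By Cauchy--Schwarz, the claim reduces to an $L^\infty$ bound on the weight quotient $\la r-t\ra^{\kappa+1-2\mu}\,\zeta^{-\delta}\,r^{-2}$, which, using $\la r-t\ra \lesssim r$ to handle the worst sign, $\zeta^{-\delta}\lesssim (r/\la r-t\ra)^{\delta/2}$, and $r \gtrsim s^2$, is $\lesssim s^{-7/2+\delta}$. The resulting bound $(C_1\eps)^2\,s^{1+2\delta}\cdot s^{-7/2+\delta}\cdot s^\delta = (C_1\eps)^2\,s^{-5/2+4\delta}$ comfortably beats the claimed rate $(C_1\eps)^2 s^{-3/2}$.

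The main technical obstacle is the delicate interplay among the three weights $\zeta$, $\la r-t\ra$, and $r$. The naive pointwise bound~\eqref{eq10-04-oct-2025} (with only $r^{-1}$ decay) does not yield integrable rates here; only the refined bound~\eqref{eq6-04-oct-2025}, which trades one power of $r$ against one power of $\la r-t\ra$ while preserving the crucial $\zeta^{1/2-\delta}$ gain originating from the mass, is sufficient to close both estimates. The detour through $[\del\Psi]_p \lesssim \zetab[\Psi]_{p+1}$ accounts for the regularity threshold $N \geq 20$, while the smallness condition $\mu > 3/4 + 10\delta$ is precisely what is required to keep the exponent $5/2-2\mu+2\delta$ strictly below $1$ so that it can be absorbed by $\la r-t\ra \lesssim r$ in the merging-Euclidean region.
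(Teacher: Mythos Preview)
Your approach is essentially the same as the paper's: bound the low-order factor pointwise via the refined Sobolev estimate~\eqref{eq6-04-oct-2025} and control the high-order factor in $L^2$ by the energy. The paper's proof is terse --- for the pointwise part it merely says ``substitute~\eqref{eq10-04-oct-2025} and~\eqref{eq6-04-oct-2025}'' --- and your write-up is in fact more explicit about the weight-trading via $\la r-t\ra/r\lesssim\zeta^2$ and $s^2\lesssim r$.

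One slip to fix: you quote $\Ebf_\mu^{\ME,N}(s,\Psi)^{1/2}\leq C_1\eps s^\delta$, but the bootstrap assumption~\eqref{eq1-02-oct-2025-H} actually gives $s^{-1}\Ebf_\mu^{\ME,N}(s,\Psi)^{1/2}\leq C_1\eps s^\delta$, so at top order the bound is $C_1\eps s^{1+\delta}$. With this correction your $L^2$ rate becomes $(C_1\eps)^2 s^{-3/2+4\delta}$ rather than $s^{-5/2+4\delta}$ --- still integrable, and in line with what the paper's computation yields. Also, for the $[\del\Psi]$ factor you don't need Proposition~\ref{prop1-24-july-2025}: the trivial inclusion $[\del\Psi]_p\leq[\Psi]_{p+1,p}$ (since $\widehat{\del_\alpha}$ is admissible) already does the job, or alternatively you may invoke the direct energy bound $\Ebf_\mu^{\ME,N}(s,\widehat{\del}\Psi)$ from the bootstrap, as the paper does.
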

\begin{proof}
For \eqref{eq9-04-oct-2025}, we only need to substitute \eqref{eq10-04-oct-2025} and \eqref{eq6-04-oct-2025}. 

For the $L^2$ estimates, we only write the second one in details. When $N\geq 19$, we have $[N/2]\leq N-10$ and then
\be
\aligned
& \big\|s\zeta\la r-t\ra^{\kappa}\cdot\zeta^{-1} \cdot[\Psi]_{p_1}[\del\Psi]_{p_2}\big\|_{L^2(\Mcal^{\ME}_s)}
\\
& \lesssim_{\delta}
\begin{cases}
C_1\eps s^{1+2\delta}\big\|\zeta^{-1/2- \delta}\la r-t\ra^{1-2\mu+\kappa}r^{-2}\cdot\zeta\la r-t\ra^{\mu}[\Psi]_N\big\|_{L^2(\Mcal^{\ME}_s)}
\\
C_1\eps s^{1+2\delta}\big\|\zeta^{-1/2- \delta}\la r-t\ra^{1-2\mu+\kappa}r^{-2}\cdot\zeta\la r-t\ra^{\mu}[\del\Psi]_N\big\|_{L^2(\Mcal^{\ME}_s)}
\end{cases}
\\
& \lesssim_{\delta}
\begin{cases}
C_1\eps s^{1+2\delta}\big\|\la r-t\ra^{3/4-2\mu+\kappa}r^{-7/4}\cdot\zeta\la r-t\ra^{\mu}[\Psi]_N\big\|_{L^2(\Mcal^{\ME}_s)}
\\
C_1\eps s^{1+2\delta}\big\|\la r-t\ra^{3/4-2\mu+\kappa}r^{-7/4}\cdot\zeta\la r-t\ra^{\mu}[\del\Psi]_N\big\|_{L^2(\Mcal^{\ME}_s)}
\end{cases}
\\
& \lesssim_{\delta}
\begin{cases}
C_1\eps s^{-5/2+2\delta}\Ebf^{\ME,N}_{\kappa}(s,\Psi)^{1/2}
\\
C_1\eps s^{-5/2+2\delta}\Ebf^{\ME,N}_{\kappa}(s,\widehat{\del}\Psi)^{1/2}
\end{cases}
\\
& \lesssim_{\delta}(C_1\eps)^2s^{-1- \delta}.
\endaligned
\ee
\end{proof}

\begin{lemma}[Estimates on high-order terms]\label{lem2-04-oct-2025}
Assume the bootstrap assumptions and \eqref{eq1-03-oct-2025} hold. Assume in addition that
\begin{equation}
N\delta\leq 1/2,\quad N\geq 19.
\end{equation}
Then one has
\begin{equation}\label{eq3-04-oct-2025}
\zeta^{-1}\HDirac_{N-3}[\del h,\Psi,\Psi] + \zeta^{-1}\HDirac_{N-3}[\del h,\Psi,\del\Psi]
\lesssim_{\delta} (C_1\eps)^3\la r-t\ra^{-3/4-3\delta}r^{-7/4+2\delta}.
\end{equation}
\begin{equation}
\big\|s\la r-t\ra^{\kappa}\HDirac_N[\del h,\Psi,\Psi]\big\|_{L^2(\Mcal^{\ME}_s)}
+\big\|s\la r-t\ra^{\kappa}\HDirac_N[\del h,\Psi,\del\Psi]\big\|_{L^2(\Mcal^{\ME}_s)}
\lesssim_{\delta}(C_1\eps)^3s^{-1- \delta}.
\end{equation}
\end{lemma}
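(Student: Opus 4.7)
}
The strategy is a standard Leibniz expansion of $\HDirac_{p}$ reduced to a product of factors, each estimated either by the sharp pointwise Sobolev decay of Section~\ref{subsec1-02-oct-2025} or, for the $L^2$ bound, by the bootstrap energy \eqref{eq1-02-oct-2025}. The central combinatorial fact is that in every term of $\HDirac_{p}$ the total order is bounded by $p$, so at most one among the factors $|\del h|_{p_l,k_l}$, $[\Psi]_{p_a,k_a}$, $[\del\Psi]_{p_b,k_b}$ carries an order above $[p/2]+1$; this forces all but one factor into the regime where the sharpest pointwise bounds in \eqref{eq4-03-oct-2025}, \eqref{eq10-04-oct-2025} and \eqref{eq6-04-oct-2025} (valid up to $N-9$, which requires $N\geq 19$) are available. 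In addition, inside $\Mcal^{\ME}_{[s_0,s_1]}$ we use the global comparison $t+r\lesssim r$, which turns each of the $j$ factors of $(t+r)$ appearing in $\HDirac_{p}$ into a factor of $r$.

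\paragraph{Pointwise estimate.}
I will first write
\[
\HDirac_{N-3}[\del h,\Psi,\Psi]
\;\lesssim_N\; \sum_{j=1}^{N-3} r^{j}\!\!\sum_{\sum p_l+p_a+p_b\leq N-3}\!\!\prod_{l=1}^{j}|\del h|_{p_l}\,[\Psi]_{p_a}[\Psi]_{p_b}.
\]
For each term I select the unique (if any) index exceeding $[N/2]\leq N-9$; to that factor I apply the high-order bound, and to every other factor the low-order bound. Each low-order factor $|\del h|$ contributes $C_1\eps\,\la r-t\ra^{-\kappa}r^{-1}s^{\delta}$, so the product of the $j$ factors $|\del h|^{\,j}$ contributes $(C_1\eps)^{j}\la r-t\ra^{-j\kappa}r^{-j}s^{j\delta}$, which exactly cancels the prefactor $r^{j}$ up to the payload $(C_1\eps\la r-t\ra^{-\kappa}s^{\delta})^{j}$. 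Combined with the sharp spinor bound \eqref{eq6-04-oct-2025} (applied at low order, so in the form $[\Psi]\lesssim C_1\eps\,\zeta^{1/2-\delta}\la r-t\ra^{1-\mu}r^{-2}s^{2\delta}$), the product of the two spinor factors gives $(C_1\eps)^{2}\zeta^{1-2\delta}\la r-t\ra^{2-2\mu}r^{-4}s^{4\delta}$. Multiplying by $\zeta^{-1}$ and summing in $j$ yields an estimate controlled by
\[
(C_1\eps)^{3}\,\zeta^{-2\delta}\,\la r-t\ra^{2-\kappa-2\mu}\,r^{-4}\,s^{O(\delta)},
\]
and the target bound \eqref{eq3-04-oct-2025} then follows from $s\lesssim r$ in $\Mcal^{\ME}$, from the weight bookkeeping $\la r-t\ra\leq 1+r$, and from $1/2<\kappa<3/4$, $3/4<\mu<9/10$, with ample room because the desired $r^{-7/4+2\delta}$ decay is far weaker than the $r^{-4}$ generated by the two spinor factors. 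The case $\HDirac_{N-3}[\del h,\Psi,\del\Psi]$ is identical after using $[\del\Psi]_{p}$ instead of $[\Psi]_{p}$ via \eqref{eq10-04-oct-2025}.

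\paragraph{$L^2$ estimate and final comments.}
The $L^2$ bound proceeds along the same lines, now putting one factor at order up to $N$ and estimating it in $L^{2}(\Mcal^{\ME}_s)$ against the bootstrap energy \eqref{eq1-02-oct-2025-H}, while the remaining factors of order $\leq [N/2]\leq N-9$ are controlled pointwise as above. Concretely, for each term we absorb the weight $s\,\zeta\,\la r-t\ra^{\kappa}$ into the energy slot: if the top-order factor is $[\Psi]_{N}$ or $[\del\Psi]_{N}$ we recognise $\zeta\la r-t\ra^{\mu}[\Psi]_{N}$ (or $[\del\Psi]_{N}$) as the energy density, mimicking the computation carried out in the proof of Lemma~\ref{lem1-04-oct-2025}; if the top-order factor is $|\del h|_N$ we apply the Hardy--Poincar\'e inequality of Proposition~\ref{eq3-15-05-2020} to exchange $\la r-t\ra^{\kappa-1}|h|_N$ for the metric energy. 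The remaining pointwise factors generate the decay $s^{-5/2+O(\delta)}$ as in Lemma~\ref{lem1-04-oct-2025}, and the bootstrap energy then provides the extra $C_1\eps\,s^{\delta}$, for a net bound of $(C_1\eps)^{3}\,s^{-1-\delta}$, with $\delta$ adjusted so that $5\delta < 3/2$; this is automatic under our constraint $N\delta\leq 1/2$.

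\paragraph{Main obstacle.}
The genuine difficulty is the polynomial prefactor $(t+r)^{j}$ appearing in $\HDirac_{p}$: naively, for $j$ comparable to $p$ it would completely destroy the claimed decay. The mechanism that rescues the estimate is that each factor $|\del h|$ in the product $\HDirac$ brings along a $r^{-1}$ in the pointwise bound \eqref{eq4-03-oct-2025}, precisely cancelling one power of $(t+r)\lesssim r$. Verifying that this cancellation survives the book-keeping of $\la r-t\ra$, $\zeta$, and $s$ factors at every level of the expansion---and in particular that $\zeta^{-1}$ in \eqref{eq3-04-oct-2025} is absorbed by the $\zeta^{1/2-\delta}$ coming from one of the spinor factors---is the only technically delicate step, and it is handled uniformly by the comparisons $t+r\lesssim r$, $s\lesssim r$ and $\zeta^{-1}\lesssim \la r-t\ra^{-1/2}r^{1/2}$ valid in $\Mcal^{\ME}_{[s_0,s_1]}$.
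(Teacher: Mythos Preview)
Your approach is essentially the same as the paper's: exploit the cancellation between the prefactor $(t+r)^{j}$ and the $j$ copies of $r^{-1}$ coming from $|\del h|_{p_l}$, control the residual growth $s^{j\delta}$ via $N\delta\le 1/2$, and for the $L^2$ bound place the single high-order factor in energy while the rest go to $L^\infty$. The paper packages the pointwise step slightly differently---it first isolates the bound $(t+r)^{j}\prod_{l}|\del h|_{p_l}\lesssim C_1\eps\,\la r-t\ra^{-\kappa}r^{1/4}$ (this is \eqref{eq4-04-oct-2025}) and then multiplies by the ready-made quadratic spinor estimate \eqref{eq2-04-oct-2025}---but the mechanism is identical.

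Two small imprecisions worth noting. First, in your pointwise computation you apply the low-order form of \eqref{eq6-04-oct-2025} to \emph{both} spinor factors, which is only valid when both orders are $\le N-9$; if one of $p_a,p_b$ lies in $[N-8,N-3]$ you must use the $s^{1+2\delta}$ version for that factor. This costs at most an extra $s\lesssim r^{1/2}$, which is harmless given the slack $r^{-4}$ versus the target $r^{-7/4+2\delta}$. Second, when the top-order factor is $|\del h|_N$, you do not need Hardy--Poincar\'e: after splitting $\del h=\del u+\del\Hreff$, the piece $|\del u|_N$ is controlled directly by the wave energy $\Ebf_{\kappa}^{\ME,N}(s,u)$ (which is precisely a weighted $L^2$ norm of $\del u$), while $|\del\Hreff|_N$ is bounded pointwise by \eqref{eq1-03-oct-2025} and can be absorbed back into the case where a spinor factor carries the top order. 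Hardy--Poincar\'e is for the undifferentiated $|u|_N$, which does not appear here.
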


\begin{proof}
For the pointwise estimate, we observe that 
$$
(t+r)|\del h|_{N-3}\lesssim C_1\eps \la r-t\ra^{- \kappa}s^{\delta}\lesssim C_1\eps \la r-t\ra^{\kappa}r^{\delta/2}. 
$$
Then we have
\begin{equation}\label{eq4-04-oct-2025}
\aligned
(t+r)^j\hspace{-0.8cm}\sum_{p_1+\cdots+p_j\leq N-3}\prod_{l=1}^j|\del h|_{p_l}
& \lesssim_{\delta} C_1\eps\big(\la r-t\ra^{- \kappa}r^{\delta/2}\big)^j
\lesssim_{\delta} (C_1\eps)\la r-t\ra^{- \kappa}r^{j\delta/2}
\\
& \lesssim (C_1\eps)\la r-t\ra^{- \kappa}r^{1/4}.
\endaligned
\end{equation}
Then combined with \eqref{eq2-04-oct-2025}, we obtain \eqref{eq3-04-oct-2025}.

For the $L^2$ bounds, we consider the product
$$
\prod_{l=1}^jr|\del h|_{p_l}[\Psi]_{p_a}[\Psi]_{p_b}, \quad\sum_{l=1}^jp_l + p_a+p_b\leq N.
$$
When $\sum p_l\leq [N/2]$, we apply \eqref{eq4-04-oct-2025} and \eqref{eq6-04-oct-2025}:
$$
\aligned
& \Big\|s\la r-t\ra^{\kappa}\prod_{l=1}^jr|\del h|_{p_l}[\Psi]_{p_a}[\Psi]_{p_b}\Big\|_{L^2(\Mcal^{\ME}_s)}
\\
& \lesssim_{\delta}(C_1\eps)^2s^{1+2\delta}\|\la r-t\ra^{- \kappa}r^{1/8}\cdot\zeta^{-1/2- \delta}\la r-t \ra^{1+\kappa-2\mu} r^{-2}\cdot\zeta\la r-t\ra^{\mu}[\Psi]_{N}\|_{L^2(\Mcal^{\ME}_s)}
\\
& \lesssim_{\delta}(C_1\eps)^2s^{-1- \delta}.
\endaligned
$$
When $\max{p_l}\geq [N/2]$, we have 
$$
\aligned
\prod_{l=1}^jr|\del h|_{p_l}[\Phi]_{p_a}[\Psi]_{p_b}
& \lesssim r^{1+N\delta/4}|\del u|_{N}[\Psi]_{[N/2]}^2 + \la r-t\ra^{-1}r^{N\delta/4}[\Psi]_{N}[\Psi]_{[N/2]}
\\
& \lesssim_{\delta} (C_1\eps)^2 \zeta^{1-2\delta}s^{4\delta}\la r-t\ra^{2-2\mu}r^{-23/8}|\del u|_{N} \\
& \quad+ (C_1\eps)^2 s^{2\delta}\zeta^{1/2- \delta}\la r-t\ra^{- \mu}r^{-15/8}[\Psi]_N.
\endaligned
$$
Then we obtain
$$
\aligned
& \Big\|s\la r-t\ra^{\kappa}\prod_{l=1}^jr|\del h|_{p_l}[\Phi]_{p_a}[\Psi]_{p_b}\Big\|_{L^2(\Mcal^{\ME}_s)}
\\
& \lesssim_{\delta}
(C_1\eps)^2 s^{1+4\delta}
\big\|\zeta^{-2\delta}\la r-t\ra^{2-2\mu}r^{-23/8}\cdot\zeta\la r-t\ra^{\kappa}|\del u|_N\big\|_{L^2(\Mcal^{\ME}_s)}
\\
& \quad+(C_1\eps)^2s^{1+2\delta}
\big\|\zeta^{-1/2- \delta}\la r-t\ra^{\kappa-2\mu}r^{-15/8}\cdot\zeta\la r-t\ra^{\mu}[\Psi]_N\big\|_{L^2(\Mcal^{\ME}_s)}
\\
& \lesssim_{\delta} (C_1\eps)^3s^{-1- \delta}.        
\endaligned
$$
The estimates concerning $\widehat{\del}\Psi$ is similar, and we omit the details.
\end{proof}

%-------------------------------------------------------

\begin{proof}[Proof of Proposition~\ref{prop1-03-oct-2025}]
We need to estimate each term in the right-hand side of \eqref{eq6-27-sept-2025}. For simplicity of expression, we denote by
\be
T_1 :=\la\Psi,\del_\mu \cdot\widehat{\del_{\nu}} \Psi \ra_{\ourD},
\quad
T_2 :=g_{\alpha\beta}\la\Psi,\Psi\ra_{\ourD},
\quad
T_3:=g^{\gamma\beta}\del_{\beta}g_{\mu\nu}\la \Psi,\del_{\gamma}\cdot\Psi\ra_{\ourD}. 
\ee
The terms in \eqref{eq11-03-oct-2025} are linear combinations with constant coefficients of them. First, thanks to \eqref{eq2-28-sept-2025},
\begin{equation}
\aligned
\big|T_2\big|_{p,k}
\lesssim_N& \sum_{p_1+p_2\leq p\atop k_1+k_2\leq k}[\Psi]_{p_1,k_1}[\Psi]_{p_2,k_2}
+\Cubic_{p,k}[h,\Psi,\Psi]
\\
& \quad + \HDirac_{p,k}[\del h,\Psi,\Psi] + \sum_{p_1+p_2\leq p\atop k_1+k_2\leq k}|h|_{p_1,k_1}\HDirac_{p_2,k_2}[\del h,\Psi,\Psi].
\endaligned
\end{equation}
Thanks to \eqref{eq3-23-july-2025} and \eqref{eq2-28-sept-2025}, we deduce that 
\begin{equation}\label{eq1-29-sept-2025}
\aligned
\zetab|\la\Phi,\del_{\gamma}\cdot\Psi\ra_{\ourD}|_{p,k}
\lesssim_N& \sum_{p_1+p_2\leq p\atop k_1+k_2\leq k}[\Phi]_{p_1,k_1}[\Psi]_{p_2,k_2} 
+ \Cubic_{p,k}[H,\Phi,\Psi]
\\
& \quad + \HDirac_{p,k}[\del H,\Phi,\Psi] 
+ \sum_{p_0+p_1\leq p\atop k_0+k_1\leq k}\hspace{-0.3cm}|H|_{p_0,k_0}\HDirac_{p_1,k_1}[\del H,\Phi,\Psi].
\endaligned
\end{equation}
Thus we have 
\begin{equation}
\aligned
\zetab|T_1|_{p,k}\lesssim_N& 
\sum_{p_1+p_2\leq p\atop k_1+k_2\leq k}[\Psi]_{p_1,k_1}[\del\Psi]_{p_2,k_2} 
+ \Cubic_{p,k}[h,\Psi,\del\Psi]
\\
& \quad + \HDirac_{p,k}[\del h,\Psi,\del\Psi] 
+ \sum_{p_0+p_1\leq p\atop k_0+k_1\leq k}|h|_{p_0,k_0}\HDirac_{p_1,k_1}[\del h,\Psi,\del\Psi].
\endaligned
\end{equation}
For the term $T_3$, we note that
$$
T_3
=\eta^{\beta\gamma}\del_{\beta}h_{\mu\nu}\la \Psi,\del_{\gamma}\cdot\Psi\ra_{\ourD} 
+ h^{\beta\gamma}\del_{\beta}h_{\mu\nu}\la \Psi,\del_{\gamma}\cdot\Psi\ra_{\ourD}.
$$

Finally, the cubic terms enjoy sufficient decay (which can be easily observed via Lemma~\ref{lem1-04-oct-2025}). Similarly, the terms concerning $\HDirac_{p,k}$ can also be bounded via Lemma~\ref{lem2-04-oct-2025}.
\end{proof}

%------------------------------------------------------------------------------------------------------------------------

\subsection{Modifications on the generalized wave coordinate conditions}
\label{subsec3-05-oct-2025}

In the present context, the metric $g$ non longer satisfies the exact {\sl wave gauge conditions} $\Gamma^{\lambda} = 0$. The generalized wave coordinate conditions \eqref{eq11-04-oct-2025} is not so far from that. More clearly,
\begin{equation}\label{eq12-04-oct-2025}
\aligned
& \big|\Gamma^{\lambda}\big|_{p,k}\lesssim C_1\eps\la r+t\ra^{-2+\theta} (1 + |u|_{p,k}),\quad&& \text{Present case},
\\
& \big|\Gamma^{\lambda}\big|_{p,k} = 0,\quad && \text{Class B of \cite{PLF-YM-PDE}}.
\endaligned
\end{equation}
Furthermore, in the present case, the wave gauge conditions on the reference is different for \cite{PLF-YM-PDE} (adjusted to the present system of notation):
\begin{equation}\label{eq2-05-oct-2025}
\aligned
&|{\Gamma_{\mathrm{r}}}^{\gamma}|\lesssim C_1\eps \la r+t\ra^{-2+\theta},\quad && \text{Present case},
\\
&|{\Gamma_{\mathrm{r}}}^{\gamma}|\lesssim C_1\eps \la r+t\ra^{-1}\la r-t\ra^{-1- \varsigma},\quad \varsigma>0,\quad && \text{Case B of \cite{PLF-YM-PDE}}.
\endaligned
\end{equation} 

However, we will explain why these difference do not destroy the argument in \cite{PLF-YM-PDE}. For this purpose, we recall the role of these conditions in \cite{PLF-YM-PDE} is to turn certain derivatives of bad direction to good directions (viz. tangent derivatives of $\Mcal_s$ or the light-cone). More precisely, they are applied in \cite[Lemma~11.2]{PLF-YM-PDE} for the decomposition of the quasi-null terms, and in \cite[Lemma~11.3 and Lemma~11.4]{PLF-YM-PDE} for the gradient of the component $g^{\Ncal00}$.

\paragraph{Decomposition of quasi-null terms.}

Let us consider the decomposition of quasi-null terms. A first change arrives due to \eqref{eq12-04-oct-2025} in the \cite[Eq. (11.11)]{PLF-YM-PDE}:
\begin{equation}
|\eta^{0b}\del_t h^{\Ncal}_{bc}|_{p,k}\lesssim \underline{|\Gamma^{\lambda}|_{p,k}} + |\dels^{\Ncal}h|_{p,k} + r^{-1}|h|_{p,k} 
+ \sum_{p_1+p_2=p\atop k_1+k_2=k}|\del h|_{p_1,k_1}|h|_{p_2,k_2}
\end{equation}
with \underline{one more term} in the right-hand side. This change leads us to two more terms in the right-hand side of \cite[Eq. (11.18)]{PLF-YM-PDE}:
\begin{equation}
|\Gamma^{\lambda}|_{p_1,k_1}|\del u|_{p_2,k_2},\quad |\Gamma^{\lambda}|_{p_1,k_1}|\Gamma^{\lambda'}|_{p_2,k_2}
\end{equation}
and two more terms in the estimate on $|\mathbb{P}^{\star\Ncal}_{00}[u,v]|_{p,k}$ in \cite[Lemma~11.2]{PLF-YM-PDE}, which are (adjusted in the present system of notation)
\begin{equation}
|\Gamma^{\lambda}|_{p_1,k_1}|\del \Hreff|_{p_2,k_2}
\quad
|{\Gamma_{\mathrm{r}}}^{\lambda}|_{p_1,k_1}|\Gamma^{\lambda'}|_{p_2,k_2}
\end{equation}
Fortunately, all enjoy sufficient decay and are {\sl trivial terms} in the sens defined at the beginning of~Section~\ref{subsec1-05-oct-2025}. That is the estimates on 
\begin{equation}
|{\Gamma_{\mathrm{r}}}^{\lambda}|_{p_1,k_1}|\del u|_{p_2,k_2}
\end{equation}
contained in the right-hand side of the estimate on $|\mathbb{P}^{\star\Ncal}_{00}[u,v]|_{p,k}$ in \cite[Lemma~11.2]{PLF-YM-PDE} (more precisely, hidden in $\mathbb{W}^{\ME}_{p_1,k_1}[v]|\del u|_{p_2,k_2}$. The remaining terms in this expression are cubic in nature). However, the decay present decay rate $r^{-2+\theta}$ is still sufficient such that the above term becomes {\sl trivial}.

%--------------------------------------------------------------------

\paragraph{The gradient of $g^{\Ncal00}$.}

We recall \cite[Lemma~11.3 and Lemma~11.4]{PLF-YM-PDE}. The estimates on the gradient of the component $g^{\Ncal00}$ now becomes (adjusted to the present system of notation):
\begin{equation}
\aligned
|\del u^{\Ncal00}|_{p,k}& \lesssim \, \underline{|\Gamma^{\lambda}|_{p,k}} + |\dels^{\Ncal}u|_{p,k} + r^{-1}|u|_{p,k} + \sum_{\gamma}|{\Gamma_{\mathrm{r}}}^{\gamma}|_{p,k} 
\\
& \quad + \sum_{p_1+p_2=p\atop k_1+k_2=k}\big(|\Hreff|_{p_1,k_1}|\del u|_{p_2,k_2} + |\del\Hreff|_{p_1,k_1}|u|_{p_2,k_2} + |u|_{p_1,k_1}|\del u|_{p_2,k_2}\big),
\endaligned
\end{equation}
\begin{equation}
\aligned
|\del_t\del_t h^{\Ncal00}|_{p,k}& \lesssim \, \underline{|\del_t\Gamma^{\lambda}|_{p,k}} + |\del\dels^{\Ncal}h|_{p,k} + r^{-1}|h|_{p,k} 
\\
& \quad + \sum_{p_1+p_2=p\atop k_1+k_2=k}\big(|h|_{p_1,k_1}|\del h|_{p_2,k_2} + |\del\del h|_{p_1,k_1}|\Hreff|_{p_2,k_2} + |\del h|_{p_1,k_1}|\del h|_{p_2,k_2}\big)
\endaligned
\end{equation}
with the \underline{underlined} terms in supplementary. We observe that these new terms do not exceed the main term of interest, viz. $\delsN u$ or $\del\delsN h$ in neither weighted$-L^2$ nor pointwise sens.

%---------------------------------------------------------------------------------------------------------------------------------------------------------------------
\subsection{Sharp decay estimates and improved energy estimates}
\label{subsec2-05-oct-2025}
We follow the argument of \cite{PLF-YM-PDE} on the $u_{\alpha\beta}$, and we will arrive at the following sharp decay estimates (cf. \cite[Proposition~16.1]{PLF-YM-PDE}):
\begin{equation}\label{eq3-05-oct-2025}
\aligned
|\del \slashed{h}|_{N-4,k} + |\del \slashed{u}|_{N-4,k}
\lesssim_{\delta,\ell}& \, C_1\eps \la r-t\ra^{-1/2- \delta/2} r^{-1+k\theta},
\\
& \text{ in }  \Mcal_{\ell,[s_0,s_1]}^{\near}, \quad 0\leq k\leq N-4,
\\
|\del\del \slashed{h}|_{N-5,k} + |\del \slashed{u}|_{N-5,k}
\lesssim_{\delta,\ell}& \, C_1\eps \la r-t\ra^{-1/2- \delta/2} r^{-1+k\theta},
\\
& \text{ in }  \Mcal_{\ell,[s_0,s_1]}^{\near}, \quad 0\leq k\leq N-5.
\endaligned
\end{equation}
\begin{equation}\label{eq4-05-oct-2025}
|u_{\source}|_k\lesssim C_1\eps r^{-1+(3N+1)\theta},\quad \text{ in }  \Mcal^{\ME}_{[s_0,s_1]},\quad 0\leq N-5.
\end{equation}
Here 
$$
\Mcal_{\ell,[s_0,s_1]}^{\near} = \Mcal^{\ME}_{[s_0,s_1]}\cap\{r\leq (1- \ell)^{-1}t\},
$$
and $u_{\source,\alpha\beta}$ the nonlinear source contribution of $u_{\alpha\beta}$, viz.,
$$
\aligned
g^{\mu\nu}\del_{\mu}\del_{\nu}u_{\source,\alpha\beta} & =\Fbb(g,g;\del u,\del u)_{\alpha\beta} 
+ u^{\mu\nu}\del_{\mu}\del_{\nu}\gref_{\alpha\beta} 
+ \Sbb(g,\del g,\Psi,\widehat{\del}\Psi)_{\alpha\beta}
\\
& \quad + 2\Fbb(g,g;\del\gref,\del u)_{\alpha\beta} 
+ 2\Fbb(u,\gref;\del \gref,\del \gref)_{\alpha\beta} 
+ \Fbb(u,u;\del\gref,\del\gref)_{\alpha\beta}
\\
& \quad -2\big(\mathbb{G}(g,\del g,\Gamma,\del\Gamma)_{\alpha\beta} - \mathbb{G}(\gref,\del \gref,\Gamma_{\mathrm{r}},\del\Gamma_{\mathrm{r}})_{\alpha\beta}\big),
\endaligned
$$
with zero initial data:
$$
u_{\source,\alpha\beta}|_{\Mcal_{s_0}} = \del_tu_{\source,\alpha\beta}|_{\Mcal_{s_0}} = 0.
$$

We also need the following result.
\begin{lemma}
Under the bootstrap assumptions, the hypersurface $\Ccal_{[s_0,s_1]}$ is spacelike.
\end{lemma}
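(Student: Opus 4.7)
The hypersurface $\Ccal_{[s_0,s_1]}$ is the zero level set of the smooth function $\phi(t,x) := t - r - 1$, so its conormal is the one-form $\diff\phi = \diff t - \diff r$. By definition, $\Ccal_{[s_0,s_1]}$ is spacelike with respect to $g$ if and only if this conormal is timelike with respect to $g^{-1}$, that is,
\begin{equation*}
g(\diff(t-r),\diff(t-r)) \;=\; g^{\alpha\beta}\,\del_\alpha(t-r)\,\del_\beta(t-r) \;<\; 0.
\end{equation*}
The first step of the proof will be to rewrite this quantity in the notation already used in the bootstrap scheme. Since $\del_0(t-r)=1$ and $\del_a(t-r) = -x^a/r$, a direct computation gives
\begin{equation*}
g(\diff(t-r),\diff(t-r)) \;=\; g^{00} - 2(x^a/r)\,g^{0a} + (x^ax^b/r^2)\,g^{ab}.
\end{equation*}
Using $\eta^{00}=-1$, $\eta^{0a}=0$, $\eta^{ab}=\delta^{ab}$ and the identity $\sum_a (x^a/r)^2 = 1$, the Minkowski part cancels exactly, and the above expression coincides with $h^{\Ncal 00}$ (the contravariant null--null component of the full perturbation $h = g - \eta$), which is precisely the quantity already controlled by the bootstrap assumption.

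The second step is the direct application of the light-bending bootstrap bound \eqref{eq6-02-oct-2025}, which provides the strict negativity
\begin{equation*}
h^{\Ncal 00} \;\leq\; -\tfrac{1}{3}\,C_L\,\eps\,s^{-1+\theta}\;<\;0
\end{equation*}
at every point where the assumption applies. On $\Ccal_{[s_0,s_1]}$ one has $r = t-1$, hence $t = (s^2+1)/2$ and $r/t = (s^2-1)/(s^2+1)$. In particular, $r/t \geq 3/4$ as soon as $s \geq \sqrt{7}$, which places the upper portion of the cone well inside the extended exterior region $\{r \geq 3t/4\}$ where the light-bending bootstrap was formulated.

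The main (mild) obstacle is to handle the short initial portion $s_0 \leq s \leq \sqrt{7}$ on $\Ccal$, which does not lie in $\{r \geq 3t/4\}$. Here $t$ and $r$ are of order unity and the argument already used in Section~\ref{subsect1-02-oct-2025} applies verbatim: the reference contribution to $h^{\Ncal 00}$ is bounded above by $-C_L\eps\,r^{-1+\theta}$ by \eqref{eq7-02-oct-2025}, while the contribution of $u$ splits into a linear initial-data part controlled pointwise by $L\eps$ (cf.\ \eqref{eq5-05-oct-2025}--\eqref{eq8-02-oct-2025}) and a quadratic nonlinear part of size $(C_1\eps)^2$ by the bootstrap pointwise bounds \eqref{eq7-03-oct-2025}. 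With the choice $L \leq C_L/3$ built into \eqref{eq9-02-oct-2025}, this yields $h^{\Ncal 00} < 0$ on the initial portion as well, provided $\eps$ is sufficiently small. Combining the two portions, $g(\diff(t-r),\diff(t-r)) < 0$ uniformly along $\Ccal_{[s_0,s_1]}$, which is the desired spacelike property; no new estimates beyond those already present in the bootstrap package are needed.
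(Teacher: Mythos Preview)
Your proof is correct and takes a dual route to the paper's. The paper shows that the tangent vectors $\del_t+\del_r$ and $\Omega_{ab}$ are spacelike, computing
\[
g(\del_t+\del_r,\del_t+\del_r) = -h^{\Ncal00} + O(|h|^2)
\]
via \eqref{eq2-02-oct-2025(l)} and then absorbing the quadratic remainder using the smallness of $\eps$. You instead show the conormal $\diff(t-r)$ is timelike, which is the equivalent characterization of a spacelike hypersurface. Your approach is in fact cleaner: the identity $g(\diff(t-r),\diff(t-r))=H^{\Ncal00}=h^{\Ncal00}$ is exact by definition (cf.~\eqref{eq1-14-june-2025}), with the Minkowski contribution vanishing identically, so no $O(|h|^2)$ correction needs to be controlled.

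Your separate treatment of the initial portion $s\in[s_0,\sqrt{7}]$ (where $r/t<3/4$ on the cone) is unnecessary. The bootstrap assumption \eqref{eq6-02-oct-2025} is asserted on all of $\Mcal_{[s_0,s_1]}$, not just the exterior region $\{r\geq 3t/4\}$, and the paper accordingly invokes it without regional distinction. Your handling of that portion is not wrong, just superfluous.
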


\begin{proof}
We note that $\{\del_t+\del_r,\Omega_{ab}\}$ generates the tangent space of $\Ccal_{[s_0,s_1]}$, while $\Omega_{ab}$ are obviously spacelike. We then recall \eqref{eq2-02-oct-2025(l)}:
$$
g(\del_t+\del_r,\del_t+\del_r) = -h^{\Ncal00} + h(|h|^2)\geq \frac{1}{3}C_L\eps r^{-1+\theta} + C(C_1\eps)^2r^{-2+4\theta+\delta/2}.
$$
Provided that $\theta,\delta$ sufficiently small, and $\eps$ sufficiently small, 
$$
g(\del_t+\del_r,\del_t+\del_r)>0. \qedhere
$$
\end{proof}

Equipped with these results together with the integrable energy bounds on the new terms, i.e., Propositions~\ref{prop1-04-oct-2025} and \ref{prop1-03-oct-2025}, we arrive at the improved energy estimates on $u_{\alpha\beta}$. 

%-------------------------------------------------------------------------------------------------------

\subsection{Improved energy estimates on spinor field}
\label{subsec1-11-oct-2025}

We first establish the estimate on $\Psihat_a$.

\begin{proposition}\label{prop1-15-oct-2025}
Under the bootstrap assumptions, the following estimate holds:
\begin{equation}\label{eq4-11-oct-2025}
\big\|\zeta^{1/2}\la r-t\ra^{\mu}[\Psihat]_{N}\big\|_{L^2(\Mcal^{\near}_s)}
\lesssim C_1\eps,
\end{equation}
\begin{equation}\label{eq5-11-oct-2025}
\big\|t\zeta^{1/2}\la r-t\ra^{\mu}[\Psihat]_{p,p-1}\big\|_{L^2(\Mcal^{\near}_s)}\lesssim 
\begin{cases}
C_1\eps s^{1+\delta},\quad& p\leq N,
\\
C_1\eps s^{\delta},\quad & p\leq N-5.
\end{cases}
\end{equation}
\end{proposition}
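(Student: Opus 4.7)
\medskip

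\noindent\emph{Proof plan.} The plan is to derive both estimates directly from the bootstrap hypothesis \eqref{eq1-11-oct-2025}, which controls the pure--Lorentz objects $t^{-1}\widehat{L_a}(\mathscr{Z}^J\Psi)$ in $\Ebf_\mu^{\ME}$ for admissible operators $\mathscr{Z}^J$ with $\rank(J)=\ord(J)\le N$. The essential task is to convert this control on the quantity $t^{-1}\widehat{L_a}\mathscr{Z}^J\Psi$ (where $\widehat{L_a}$ sits between $t^{-1}$ and the pure--Lorentz block) into the control of $\mathscr{Z}^I\Psihat_a=\mathscr{Z}^I(t^{-1}\widehat{L_a}\Psi)$ for an \emph{arbitrary} admissible operator $\mathscr{Z}^I$ of type $(p,k)$, and to exploit the near-light-cone region $\Mcal^{\near}_{[s_0,s_1]}$ where $t\sim r$.

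\medskip

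\noindent\emph{Step 1 (Leibniz expansion and weights).} First I expand, via the Leibniz rule \eqref{eq4-04-july-2025} for the adapted action and the admissible partition notation of Section~\ref{section===11-1},
\[
\mathscr{Z}^I \Psihat_a \;=\; \sum_{I_1\odot I_2=I}\bigl(\mathscr{Z}^{I_1} t^{-1}\bigr)\,\mathscr{Z}^{I_2}\bigl(\widehat{L_a}\Psi\bigr).
\]
Since $t^{-1}$ is homogeneous of degree $-1$ in $\Mcal^{\near}$, for $\mathscr{Z}^{I_1}$ of type $(q,l)$ we have $|\mathscr{Z}^{I_1}t^{-1}|\lesssim t^{-1-(q-l)}$. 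The weight appearing in \eqref{eq4-11-oct-2025} is $\zeta^{1/2}\la r-t\ra^\mu$; since $\omega\simeq 2+r-t$ on $\Mcal^{\near}$ and $\zeta\le\zetab$, this weight is dominated pointwise by the bootstrap energy weight $\zetab^{1/2}(1+\omega^{2\mu})^{1/2}$, so that \emph{any} $L^2$ integrand controlled by the bootstrap energy will control our target integrand.

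\medskip

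\noindent\emph{Step 2 (Estimate \eqref{eq4-11-oct-2025}).} Then I analyze the three types of contributions in the Leibniz expansion. (i) If $I_1=\emptyset$ and $I_2=I$ is pure Lorentz, I commute $\widehat{L_a}$ past $\mathscr{Z}^{I_2}$, writing $\mathscr{Z}^{I_2}\widehat{L_a}\Psi = \widehat{L_a}\mathscr{Z}^{I_2}\Psi+[\mathscr{Z}^{I_2},\widehat{L_a}]\Psi$. The first piece, after multiplication by $t^{-1}$, is exactly the object appearing in \eqref{eq1-11-oct-2025} and is bounded by $C_1\eps$. The commutator is handled by Proposition~\ref{prop1-23-july-2025} and Proposition~\ref{prop1-19-july-2025}, producing lower-order quantities plus cubic errors $\Hcom_p, \Hwave_p$ whose smallness is guaranteed by the flatness conditions (Section~\ref{section=N11}) and the pointwise bounds \eqref{eq4-03-oct-2025}, \eqref{eq7-03-oct-2025} on $h$ together with the Sobolev decays \eqref{eq10-04-oct-2025}--\eqref{eq6-04-oct-2025} for $\Psi$. (ii) If $I$ is mixed, so that $I_2$ contains translations, I use Proposition~\ref{prop1-21-july-2025} to exchange translations with $\widehat{L_a}$ at the cost of commutators already controlled in (i), plus lower-order terms. (iii) If $I_1\neq\emptyset$, every translation in $I_1$ gives an extra decay factor $t^{-1}$ in $\mathscr{Z}^{I_1}t^{-1}$, which more than compensates the reduced order of $I_2$; the lower-order bootstrap \eqref{eq1-02-oct-2025-L} is then enough to close. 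Summing over admissible partitions yields \eqref{eq4-11-oct-2025}.

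\medskip

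\noindent\emph{Step 3 (Estimate \eqref{eq5-11-oct-2025} and main obstacle).} For the second bound, the operator $\mathscr{Z}^I$ is of type $(p,p-1)$, hence contains at least one translation. Keeping the factor $t$ outside the norm, I still expand as in Step~1; the presence of a translation guarantees that \emph{either} $\mathscr{Z}^{I_1}$ is nonempty and contains a translation, producing at least one extra $t^{-1}$ that cancels the external $t$, \emph{or} the translation sits in $\mathscr{Z}^{I_2}$ and, after commuting inside $\widehat{L_a}$, produces a derivative on $\Psi$ that lowers the rank by one and is controlled by $\Ebf_\mu^{\ME,N}(s,\widehat{\del}\Psi)$. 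In both scenarios the target bound reduces to \eqref{eq1-02-oct-2025-H} for $p\le N$ (giving the $C_1\eps s^{1+\delta}$ growth) or to \eqref{eq1-02-oct-2025-L} for $p\le N-5$ (giving the $C_1\eps s^\delta$ bound). The hard part will be in Step~2(i), where the commutator $[\mathscr{Z}^{I_2},\widehat{L_a}]$ must be controlled \emph{without} losing a factor of $\eps$: the cubic remainders $\Hcom_p,\Hwave_p$ from Proposition~\ref{prop1-23-july-2025} must be shown to absorb into the bootstrap energy with quadratic smallness $(C_1\eps)^2$, which requires carefully matching the $\zetab$--weight and the $\omega^\mu$--weight against the target $\zeta^{1/2}\la r-t\ra^\mu$ throughout the near-light-cone region, and using the $\Cubic^+$ and $\HDirac$ bookkeeping from Lemmas~\ref{lem1-04-oct-2025}--\ref{lem2-04-oct-2025}.
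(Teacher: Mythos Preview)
Your overall strategy---Leibniz expansion of $\mathscr{Z}^I(t^{-1}\widehat{L_a}\Psi)$, commuting $\widehat{L_a}$ to the front, and invoking the bootstrap \eqref{eq1-11-oct-2025}---is correct and coincides with the paper's approach. Two points, however, need repair.

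First, in Step~2(i) you invoke Propositions~\ref{prop1-23-july-2025} and~\ref{prop1-19-july-2025} to control the commutator $[\mathscr{Z}^{I_2},\widehat{L_a}]\Psi$, but those results concern commutators with the \emph{Dirac operator} $\opDirac$, not commutators between adapted admissible derivatives. The correct tool is Proposition~\ref{prop1-22-june-2025}: iterating it shows that $[\mathscr{Z}^{I_2},\widehat{L_a}]\Psi$ is a sum of pure-Lorentz terms of order $\le |I_2|$ (hence bounded by $[\Psi]_N$) plus genuinely cubic corrections. This is also implicitly what the paper uses when it passes from $t^{-1}\sum_{|J|\le N+1,\,\rank J=\ord J}|\mathscr{Z}^J\Psi|$ to $t^{-1}\sum_a\sum_{|J'|=N}|\widehat{L_a}\mathscr{Z}^{J'}\Psi|+t^{-1}[\Psi]_N$.

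Second, your Step~3 claim that every operator contributing to $[\Psihat]_{p,p-1}$ contains a translation is false for pure-Lorentz operators of order $\le p-1$. The paper handles those first and separately via the direct bound \eqref{eq2-13-oct-2025}, namely $t[\Psihat]_{p-1}\lesssim[\Psi]_p$, which immediately gives the desired $L^2$ control by the bootstrap energies \eqref{eq1-02-oct-2025-H}--\eqref{eq1-02-oct-2025-L}; only the genuinely top-order case $\ord(I)=p$, $\rank(I)\le p-1$ requires extracting a translation and using Lemma~\ref{lem1-16-aout-2025} (equation \eqref{eq2-16-aout-2025}). Relatedly, the ``main obstacle'' you identify---achieving $(C_1\eps)^2$ smallness for cubic remainders---is not the actual bottleneck: the dominant commutator contributions are the \emph{linear} lower-order pieces bounded by $[\Psi]_N$, and after the weight $t^{-1}\sim s^{-2}$ on $\Mcal^{\near}_s$ these give $s^{-2}\Ebf^N_\mu(s,\Psi)^{1/2}\lesssim C_1\eps s^{-1+\delta}$, which already closes \eqref{eq4-11-oct-2025} without any appeal to quadratic gain.
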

\begin{proof}

We observe that, thanks to \eqref{eq2-13-oct-2025}, 
\begin{equation}
\|t\zeta^{1/2}\la r-t\ra^{\mu}[\Psihat]_{p-1}\|_{L^2(\Mcal^{\near}_s)}
\lesssim 
\begin{cases}
C_1\eps s^{1+\delta},\quad& p\leq N,
\\
C_1\eps s^{\delta},\quad & p\leq N-5.
\end{cases}
\end{equation}

For the top-order estimates, we consider \eqref{eq5-11-oct-2025}. Let $\mathscr{Z}^I$ be an admissible operator with $\ord(I) = N$ and $\rank(I) = k\leq N-1$. Then there exist admissible operators $\mathscr{Z}^{I_1},\mathscr{Z}^{I_2}$ such that
\be
\mathscr{Z}^I = \mathscr{Z}^{I_1}\circ\widehat{\del_{\delta}}\circ\mathscr{Z}^{I_2},
\quad
\ord(I_1)+\ord(I_2) = N-1,
\quad
\rank(I_1) + \rank(I_2) = k\leq N-1. 
\ee
We then apply \eqref{eq2-16-aout-2025} and obtain
\begin{equation}
\big|\mathscr{Z}^I\Psi\big|_{\vec{n}}\lesssim_N \sum_{\ord(J)\leq N-1\atop rank(J)\leq k}
|\widehat{\del_{\delta}}\mathscr{Z}^J\Psi|_{\vec{n}}
+ \Hcom_{N-1}[\Psi].
\end{equation}
Thus
\begin{equation}
\aligned
|\mathscr{Z}^I\Psihat_a|_{\vec{n}} = \big|\mathscr{Z}^I(t^{-1}\widehat{L_a})\Psi\big|_{\vec{n}}
\lesssim_N& t^{-1}\sum_{\delta}\sum_{\ord(J)\leq N-1\atop \rank(J)\leq N-1}
\big|\widehat{\del}_{\delta}\mathscr{Z}^J\widehat{L_a}\Psi\big|_{\vec{n}}
+ t^{-1}\Hcom_{N-1}[\widehat{L_a}\Psi]
\\
& \lesssim 
t^{-1}\sum_{\delta}\sum_{\ord(K)\leq N\atop \rank(K)\leq N}
\big|\widehat{\del}_{\delta}\mathscr{Z}^K\Psi\big|_{\vec{n}}
+ t^{-1}\Hcom_{N}[\Psi].
\endaligned
\end{equation}
Then by the energy bounds \eqref{eq1-02-oct-2025-H} and \eqref{eq1-02-oct-2025-L} one obtain the desired $L^2$ bounds on the linear terms in the right-hand side. For the high-order terms we also need the Sobolev decay estimates \eqref{eq4-03-oct-2025}, \eqref{eq7-03-oct-2025} and \eqref{eq6-04-oct-2025}. We omit the details.

We now turn our attention to to \eqref{eq4-11-oct-2025}. The only case left our is when $\ord(I) = \rank(I) = N$. We perform the following calculation:
\be
\mathscr{Z}^I(\Psihat_a) = \mathscr{Z}^I\big(t^{-1}\widehat{L_a}\Psi\big) 
= \sum_{I_1\odot I_2 = I}\mathscr{Z}^{I_1}(t^{-1})\mathscr{Z}^{I_2}(\widehat{L_a}\Psi).
\ee
By the homogeneity of $t^{-1}$, one obtains, in $\Mcal^{\near}_{[s_0,s_1]}$:
\be
|\mathscr{Z}^I\Psihat_a|_{\vec{n}}\lesssim_N t^{-1}\sum_{\ord(J)\leq N+1}\mathscr{Z}^{J}\Psi,
\ee
where $\rank(J) = \ord(J)$. Thus we obtain 
\begin{equation}
|\mathscr{Z}^I\Psihat_a|_{\vec{n}}\lesssim_N t^{-1}\sum_{a}\sum_{\ord(J') = N\atop \rank(J')=N}
\widehat{L_a}\mathscr{Z}^{J'}\Psi + t^{-1}\sum_{\ord(J')\leq N\atop \rank(J')\leq N}\Psi.
\end{equation}
Then we derive an $L^2$ estimate. Thanks to relation
\be
s\lesssim t^{1/2},\quad \text{in}\quad\Mcal^{\ME}_{[s_0,s_1]},
\ee
the second term in the right-hand side can be bounded via \eqref{eq1-02-oct-2025-H}, \eqref{eq1-02-oct-2025-L}. The first term in the right-hand side is bounded by \eqref{eq1-11-oct-2025}.
\end{proof}

We now turn our attention to the component $t^{-1}\widehat{L_a}\mathscr{Z}^J\Psi$ with $\ord(J) = \rank(J)\leq N$, 
and we establish the following estimates.

\begin{lemma}
Let $\mathscr{Z}^J$ be such that $\ord(J) = \rank(J)$ and set
$\mathscr{Z}^{J_a} := L_a\mathscr{Z}^J$. Thenone has 
\begin{equation}\label{eq6-10-oct-2025}
\aligned
& \int_{s_0}^{s_1}s\Ebf_{\mu}^{\ME}(s,t^{-1}\mathscr{Z}^{J_a}\Psi)^{1/2}
\|t^{-1}\omega^{\mu}\zeta^2\zetab^{-1/2}|\Phi^{J_a}|_{\vec{n}}\|_{L^2(\Mcal^{\ME}_s)}\diff s
\\
& \lesssim_{\delta}(C_1\eps)^2
\int_{s_0}^{s_1}s^{-1- \delta}\Ebf_{\mu}^{\ME}(s,t^{-1}\mathscr{Z}^{J_a}\Psi)^{1/2}\diff s,
\endaligned
\end{equation}
\begin{equation}\label{eq7-10-oct-2025}
\int_{s_0}^{s_1}s\|t^{-1}|h|\|_{L^{\infty}(\Mcal^{\ME}_s)}\Ebf_{\mu}^{\ME}(s,t^{-1}\mathscr{Z}^{J_a}\Psi) \,\diff s
\lesssim C_1\eps \int_{s_0}^{s_1}s^{-1- \delta}\Ebf_{\mu}^{\ME}(s,t^{-1}\mathscr{Z}^{J_a}\Psi) \,\diff s.
\end{equation}
\end{lemma}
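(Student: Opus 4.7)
The two integrals arise when one applies the weighted energy identity \eqref{eq2-10-oct-2025} of Proposition~\ref{prop1-05-oct-2025} to the renormalized spinor $\mathscr{Z}^{J_a}\Psi$, so the bounds in \eqref{eq6-10-oct-2025}--\eqref{eq7-10-oct-2025}, combined with Gronwall's inequality, will close the energy estimate for $t^{-1}\widehat{L_a}\mathscr{Z}^J\Psi$ and upgrade the bootstrap assumption \eqref{eq1-11-oct-2025} to \eqref{eq3-11-oct-2025}. The decisive new ingredient compared with the ordinary spinor energy estimate is the extra factor $t^{-1}$, which on $\Mcal^{\ME}_s$ satisfies $t^{-1}\lesssim s^{-2}$ uniformly, since $t\ge T(s,r^{\Hcal}(s))=\sqrt{s^2+(r^{\Hcal}(s))^2}\simeq s^2/2$ throughout the merging--Euclidean domain.

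For \eqref{eq7-10-oct-2025}, I would start from the pointwise Sobolev bound \eqref{eq7-03-oct-2025}, $|h|_{N-2}\lesssim_{\delta}C_1\eps\,r^{-1}\la r-t\ra^{1-\kappa}s^{\delta}$, and combine it with the geometry of $\Mcal^{\ME}_s$. There both $r,t\gtrsim s^2$, and splitting the slice into $\{r\ge t\}$ and $\{r\le t\}$ while using $\la r-t\ra\le r+t$ shows $\|r^{-1}\la r-t\ra^{1-\kappa}\|_{L^\infty(\Mcal^{\ME}_s)}\lesssim s^{-2\kappa}$. Multiplying by $t^{-1}\lesssim s^{-2}$ then yields $\|t^{-1}|h|\|_{L^\infty(\Mcal^{\ME}_s)}\lesssim_{\delta}C_1\eps\,s^{-2-2\kappa+\delta}$, and since $\kappa>1/2$ and $\delta$ is small the exponent is $\le -2-\delta$, so \eqref{eq7-10-oct-2025} follows after the final factor of $s$.

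For \eqref{eq6-10-oct-2025}, my plan is to apply the high-order commutator estimate \eqref{eq10b-18-july-2025} of Proposition~\ref{prop1-14-aout-2025} and decompose $\zetab|\Phi^{J_a}|_{\vec n}$ into a finite sum of quadratic and cubic expressions in $\del\Psi$, $\Psi$, $H$, $\del H$, and the gauge vector $W$. Each product is then estimated by the standard split: the lower-order factor is bounded pointwise via the Sobolev estimates \eqref{eq4-03-oct-2025}, \eqref{eq7-03-oct-2025}, \eqref{eq10-04-oct-2025}--\eqref{eq6-04-oct-2025} and the wave-coordinate control in Corollary~\ref{lem3-24-sept-2025}, while the top-order factor is controlled in $L^2$ by the bootstrap energies \eqref{eq1-02-oct-2025}. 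Inserting the gain $t^{-1}\lesssim s^{-2}$ and noting that the pointwise factors deliver decay of order $C_1\eps\,r^{-2}s^{1+\delta}$ on $\Mcal^{\ME}_s$ (where $r\gtrsim s^2$), one obtains $s\|t^{-1}\omega^{\mu}\zeta^2\zetab^{-1/2}|\Phi^{J_a}|_{\vec n}\|_{L^2(\Mcal^{\ME}_s)}\lesssim(C_1\eps)^2 s^{-1-\delta}$, which is precisely \eqref{eq6-10-oct-2025}.

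The main obstacle lies in the genuinely top-order contributions in \eqref{eq10b-18-july-2025}: when $\ord(J)=N$, the bound can formally involve $|H|_{p_2+1}$ with $p_2+1$ approaching $N+1$, just above the available energy order. For this borderline piece I would not use the pointwise-times-$L^2$ split above but the dual-pairing formulation \eqref{eq2-14-oct-2025}, testing against $t^{-1}\mathscr{Z}^{J_a}\Psi$; the pairing allows an integration by parts that transfers one derivative off $H$ onto the test spinor and replaces the loss by the top-order metric energy $\Ebf_{\kappa}^{\ME,N}(s,u)^{1/2}\lesssim C_1\eps\,s^{\delta}$, after which the $s^{-2}$ factor from $t^{-1}$ more than compensates the small loss $s^{\delta}$. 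This dual-pairing step is the only delicate point in the argument; all remaining contributions reduce to routine algebraic and weight manipulations already used in Lemmas~\ref{lem1-04-oct-2025} and~\ref{lem2-04-oct-2025}.
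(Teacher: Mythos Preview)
Your treatment of \eqref{eq7-10-oct-2025} is correct and matches the paper: the pointwise bound \eqref{eq7-03-oct-2025} combined with $t^{-1}\lesssim s^{-2}$ on $\Mcal^{\ME}_s$ does the job.

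For \eqref{eq6-10-oct-2025}, however, there is a real gap. Two points:

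\textbf{(i) The special $(p,p)$ structure.} You invoke the general commutator estimate \eqref{eq10b-18-july-2025}, but the whole point of the hypothesis $\ord(J)=\rank(J)$ is that $\mathscr{Z}^{J_a}=L_a\mathscr{Z}^J$ is a \emph{pure} boost/rotation operator of type $(p',p')$. For such operators every admissible sub-index also satisfies $p_i=k_i$, so the second sum in \eqref{eq10b-18-july-2025} (the one carrying $|\del H|_{p_2,k_2}$) vanishes identically. This is exactly Lemma~\ref{lem1-09-oct-2025}, i.e.\ estimate \eqref{eq4-09-oct-2025}, which is what the paper uses. You should start there; it removes the $|\del H|$ terms from the outset.

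\textbf{(ii) The top-order $|u|_{N+1}$ term.} You correctly identify that the remaining sum $\sum[\del\Psi]_{p_1-1}|h|_{p_2+1}$ forces $|h|_{N+1}$ at the endpoint, but your proposed fix is not valid. The estimate \eqref{eq2-14-oct-2025} is a \emph{pointwise} bound on $|\la\Phi,\Phi^I\ra_{\ourD}|$ in the near-light-cone region; it is not an integration-by-parts identity and does not transfer any derivative from $H$ to the test spinor. There is no mechanism in this framework that does what you describe. The paper's fix is purely algebraic: write $h=\Hreff+u$, handle $|\Hreff|_{N+1}$ by the reference decay \eqref{eq1-03-oct-2025}, and for $|u|_{N+1}$ peel off the outermost admissible vector. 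If that vector is $\del_\alpha$, the term is directly controlled by $\Ebf_\kappa^{\ME,N}(s,u)$. If it is $L_b$ (or $\Omega_{ab}$), use $L_b = t\,\delN_b + (x^b/r)(r-t)\,\del_t$ in $\Mcal^{\ME}$, so that $|L_b\mathscr{Z}^{J'}u|\lesssim r|\delN_b\mathscr{Z}^{J'}u|+\la r-t\ra|\del\mathscr{Z}^{J'}u|$ with $\ord(J')\le N$; both pieces sit inside the top-order scalar energy with the correct weights. The undifferentiated $|u|_N$ is handled by the Hardy--Poincar\'e inequality (Proposition~\ref{eq3-15-05-2020}). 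After this decomposition the extra $t^{-1}\lesssim s^{-2}$ gives more than enough decay to reach $(C_1\eps)^2 s^{-2-\delta}$, which is \eqref{eq6-10-oct-2025} after the factor $s$.
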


\begin{proof}
For \eqref{eq6-10-oct-2025}, we apply \eqref{eq4-09-oct-2025}.
$$
\aligned
\zetab|\Phi^{J_a}|_{\vec{n}}& \lesssim \sum_{p_1+p_2\leq p+1\atop k_1+k_2\leq p+1}
[\del\Psi]_{p_1-1,k_1-1}|h|_{p_2+1,k_2+1} 
+ \Hcom_p[\Psi] +  \Hwave_p[\Psi]
\\
& \lesssim \sum_{p_1+p_2\leq p\atop k_1+k_2\leq p}
[\del\Psi]_{p_1,k_1}|\Hreff|_{p_2+1,k_2+1}
+\sum_{p_1+p_2\leq p\atop k_1+k_2\leq p}[\del\Psi]_{p_1,k_1}|u|_{p_2+1,k_2+1}
\\
& \quad +\Hcom_p[\Psi] + \Hwave_p[\Psi]
=: T_1 + T_2 + \Hcom_p[\Psi] + \Hwave_p[\Psi].
\endaligned
$$

For the term $T_1$, we apply \eqref{eq1-03-oct-2025},
\begin{equation}\label{eq3-10-oct-2025}
\aligned
\big\|t^{-1}\omega^{\mu}\zeta^2\zetab^{-3/2}|T_1|\big\|_{L^2(\Mcal^{\ME}_s)}
& \lesssim C_1\eps \| t^{-1}r^{-1+\theta}\zetab^{1/2}\omega^{\mu}[\del \Psi]_{N}\|_{L^2(\Mcal^{\ME}_s)}
\\
& \lesssim C_1\eps s^{-4+2\theta}\Ebf^{\ME,N}_{\kappa}(s,\Psi)^{1/2}
\lesssim(C_1\eps)^2 s^{-3+\delta+2\theta}.
\endaligned
\end{equation}
Term $T_2$ is a bit more complicated. When $p_1\leq N-10$, we apply \eqref{eq6-04-oct-2025} to $[\del\Psi]_{p_1}$:
$$
\aligned
& \|t^{-1}\omega^{\mu}\zeta^2\zetab^{-3/2}[\del\Psi]_{p_1,k_1}|u|_{p_2+1,k_2+1}\|_{L^2(\Mcal^{\ME}_s)}
\\
& \lesssim 
C_1\eps s^{2\delta}\|t^{-1}\zeta^{1- \delta}\la r-t\ra r^{-2} |u|_{N}\|_{L^2(\Mcal^{\ME}_s)}
\\
& \quad +C_1\eps s^{2\delta}\sum_{\ord(J')\leq N}\sum_b\|t^{-1}\zeta^{1- \delta}\la r-t\ra r^{-2} |L_b\mathscr{Z}^{J'}u|\|_{L^2(\Mcal^{\ME}_s)}
\\
& \quad +C_1\eps s^{2\delta}\sum_{\ord(J')\leq N}\sum_{\alpha}\|t^{-1}\zeta^{1- \delta}\la r-t\ra r^{-2} |\del_{\alpha}\mathscr{Z}^{J'}u|\|_{L^2(\Mcal^{\ME}_s)}
\\
& \lesssim C_1\eps s^{2\delta}\|t^{-1}\zeta^{1- \delta}\la r-t\ra^{\kappa}r^{-2}  \la r-t\ra^{1- \kappa}|u|_{N}\|_{L^2(\Mcal^{\ME}_s)}
\\
& \quad + C_1\eps s^{2\delta}\sum_{\ord(J')\leq N}\sum_b
\|t^{-1}\zeta^{1- \delta}\la r-t\ra^{1- \kappa} r^{-1} |\la r-t \ra^{\kappa}\delN_b\mathscr{Z}^{J'}u|\|_{L^2(\Mcal^{\ME}_s)}
\\
& \quad +C_1\eps s^{2\delta}\|t^{-1}\zeta^{1- \delta}r^{-2}|\del \mathscr{Z}^{J'}u|\|_{L^2(\Mcal^{\ME}_s)}
\\
& \quad + C_1\eps s^{2\delta}\|t^{-1}\zeta^{1- \delta}\la r-t\ra r^{-2} |\del u|_N  \|_{L^2(\Mcal^{\ME}_s)}
\\
\lesssim_{\delta}& (C_1\eps)^2 s^{-2-2\kappa + 4\delta}.
\endaligned
$$
When $p_2\leq N-2$, we apply \eqref{eq5-03-oct-2025} oto$|u|$ and obtain
$$
\aligned
& \|t^{-1}\omega^{\mu}\zeta^2\zetab^{-3/2}[\del\Psi]_{p_1,k_1}|u|_{p_2+1,k_2+1}\|_{L^2(\Mcal^{\ME}_s)}
\\
\lesssim_{\delta}& C_1\eps s^{\delta}
\|t^{-1}r^{-1}\la r-t \ra^{1- \kappa}\omega^{\mu}\zetab^{1/2}[\del\Psi]_{N}\|_{L^2(\Mcal^{\ME}_s)}
\\
\lesssim_{\delta}& C_1\eps s^{-2-2\kappa + \delta}\Ebf_{\kappa}^{\ME,N}(s,\Psi)^{1/2}
\lesssim_{\delta} (C_1\eps)^2 s^{-2- \delta}.
\endaligned
$$
So we have
\begin{equation}
\big\|t^{-1}\omega^{\kappa}\zeta^2\zetab^{-3/2}|T_2|\big\|_{L^2(\Mcal^{\ME}_s)}
\lesssim_{\delta} (C_1\eps)^2 s^{-2- \delta}.
\end{equation}
The estimates on high-order terms $\Hcom, \Hwave$ are easier. We omit the details. This concludes the derivation of~\eqref{eq6-10-oct-2025}. For \eqref{eq7-10-oct-2025}, we only need to apply \eqref{eq7-03-oct-2025}.
\end{proof}

%-----------------------------------------------------

\begin{lemma}
Under the bootstrap assumptions, the following energy estimate holds:
\begin{equation}
\sum_{\ord(J) = p\atop \rank(J) = p}\Ebf_{\mu}^{\ME}(s,t^{-1}\widehat{L_a}(\mathscr{Z}^J\Psi))^{1/2}
\leq C_0\eps + C(C_1\eps)^{3/2},\quad p\leq N.
\end{equation}
\end{lemma}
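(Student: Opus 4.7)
The plan is to derive the claimed bound as a direct consequence of the weighted energy identity for $t^{-1}\Psi$ recorded in \eqref{eq1-10-oct-2025} applied to $\mathscr{Z}^{J_a}\Psi := L_a\mathscr{Z}^J\Psi$ (adapted spinor action, so that $\mathscr{Z}^{J_a}\Psi = \widehat{L_a}\mathscr{Z}^J\Psi$), combined with the two source estimates \eqref{eq6-10-oct-2025}, \eqref{eq7-10-oct-2025} established in the preceding lemma and a Gronwall argument. The key structural point is that $\mathscr{Z}^{J_a}$ has type $(p+1,p+1)$, i.e.\ it is purely composed of boosts and rotations; this is precisely the hypothesis under which the top-rank commutator estimate \eqref{eq4-09-oct-2025} applies, and this is what \eqref{eq6-10-oct-2025} exploits.

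First, for each fixed $a\in\{1,2,3\}$ and each $J$ with $\ord(J)=\rank(J)=p\leq N$, the spinor $\mathscr{Z}^{J_a}\Psi$ solves
\begin{equation*}
\opDirac(\mathscr{Z}^{J_a}\Psi) + \mathrm{i}M\,\mathscr{Z}^{J_a}\Psi = \Phi^{J_a} = -[\mathscr{Z}^{J_a},\opDirac]\Psi,
\end{equation*}
by \eqref{eq10a-18-july-2025} and the Dirac equation satisfied by $\Psi$. Applying the weighted energy identity \eqref{eq1-10-oct-2025} (with $\kappa$ there replaced by our weight index $\mu$), observing that the conical flux $\Ebf^{\Ccal}(s_0,s_1;t^{-1}\mathscr{Z}^{J_a}\Psi)$ is non-negative since $\Ccal_{[s_0,s_1]}$ has been shown to be spacelike under the bootstrap assumptions, and that the two bulk integrals in the left-hand side of \eqref{eq1-10-oct-2025} are likewise non-negative (Lemma~\ref{lem1-18-aout-2025}), I obtain the energy inequality
\begin{equation*}
\aligned
\Ebf_{\mu}^{\ME}(s,t^{-1}\mathscr{Z}^{J_a}\Psi)
& \leq \Ebf_{\mu}^{\ME}(s_0,t^{-1}\mathscr{Z}^{J_a}\Psi)
+ C\!\int_{s_0}^{s}\!\!\tau\,\Ebf_{\mu}^{\ME}(\tau,t^{-1}\mathscr{Z}^{J_a}\Psi)^{1/2}\,\|\tau^{-1}\omega^{\mu}\zeta^2\zetab^{-1/2}|\Phi^{J_a}|_{\vec n}\|_{L^2(\Mcal^{\ME}_\tau)}\,\diff\tau
\\
& \quad + C\!\int_{s_0}^{s}\!\!\tau\,\|\tau^{-1}|h|\|_{L^\infty(\Mcal^{\ME}_\tau)}\,\Ebf_\mu^{\ME}(\tau,t^{-1}\mathscr{Z}^{J_a}\Psi)\,\diff\tau.
\endaligned
\end{equation*}

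Next, the second integral is controlled by \eqref{eq7-10-oct-2025} with right-hand side $\lesssim C_1\eps\int_{s_0}^s \tau^{-1-\delta}\Ebf_\mu^{\ME}(\tau,t^{-1}\mathscr{Z}^{J_a}\Psi)\diff\tau$, and the first is controlled by \eqref{eq6-10-oct-2025} with right-hand side $\lesssim (C_1\eps)^2\int_{s_0}^s \tau^{-1-\delta}\Ebf_\mu^{\ME}(\tau,t^{-1}\mathscr{Z}^{J_a}\Psi)^{1/2}\diff\tau$. Setting $y(s):=\Ebf_\mu^{\ME}(s,t^{-1}\mathscr{Z}^{J_a}\Psi)^{1/2}$, this yields
\begin{equation*}
y(s)^2\leq y(s_0)^2 + C(C_1\eps)^2\!\int_{s_0}^{s}\!\tau^{-1-\delta}y(\tau)\,\diff\tau + CC_1\eps\!\int_{s_0}^{s}\!\tau^{-1-\delta}y(\tau)^2\,\diff\tau.
\end{equation*}
For the initial term, since $\widehat{L_a}$ is an admissible derivative of order one, $t^{-1}\widehat{L_a}\mathscr{Z}^J\Psi|_{\Mcal_{s_0}}$ is controlled on the initial slice by $\Ebf_\mu^N(s_0,\Psi)^{1/2}+\Ebf_\mu^N(s_0,\widehat{\del}\Psi)^{1/2}\leq C_0\eps$, thanks to the initial smallness \eqref{eq3-02-oct-2025} (and the fact that on $\Mcal_{s_0}$ the weight $t^{-1}$ is bounded).

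Finally, a standard Gronwall-type argument absorbs the quadratic-in-$y$ integral through the exponential factor $\exp\!\bigl(CC_1\eps\int_{s_0}^{\infty}\tau^{-1-\delta}\diff\tau\bigr)\lesssim 1+CC_1\eps\delta^{-1}$, which is bounded since $C_1\eps$ is small by hypothesis, and then yields
\begin{equation*}
y(s)\leq C_0\eps + C(C_1\eps)^2\int_{s_0}^{\infty}\tau^{-1-\delta}\diff\tau \;\lesssim\; C_0\eps + \delta^{-1}C(C_1\eps)^2 \;\leq\; C_0\eps + C(C_1\eps)^{3/2},
\end{equation*}
where the last inequality uses the smallness $C_1\eps\leq 1/2$. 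Summing over the finitely many $a$ and $J$ with $\ord(J)=\rank(J)=p$ absorbs the combinatorial factor into the constant $C$, and gives the stated bound for every $p\leq N$. The only delicate point is ensuring that the constant multiplying $(C_1\eps)^2$ does not depend on $C_1$; this is automatic because the source constants in \eqref{eq6-10-oct-2025} depend only on the system and on $N,\delta$, not on the bootstrap constant $C_1$, so the Gronwall exponential is harmless for $\eps$ sufficiently small relative to the chosen $C_1$.
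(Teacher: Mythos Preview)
Your approach is essentially the same as the paper's: apply the $t^{-1}$-weighted energy estimate on the merging-Euclidean domain and bound the two source terms via \eqref{eq6-10-oct-2025} and \eqref{eq7-10-oct-2025}. Two minor points. First, the correct energy identity is \eqref{eq2-10-oct-2025} (the $\Mcal^{\ME}$ version from Proposition~\ref{prop1-05-oct-2025}), not \eqref{eq1-10-oct-2025}; the latter is on the full slab and has no conical flux term, and the $t^{-1}g^{0\mu}$ bulk term there is not sign-definite until you split off the $H^{0\mu}$ part as done in \eqref{eq2-10-oct-2025}. Second, the paper closes more directly than you do: rather than Gronwall, it simply inserts the bootstrap bound \eqref{eq1-11-oct-2025}, namely $\Ebf_{\mu}^{\ME}(s,t^{-1}\mathscr{Z}^{J_a}\Psi)^{1/2}\leq C_1\eps$, into the right-hand sides of \eqref{eq6-10-oct-2025} and \eqref{eq7-10-oct-2025}, which immediately gives the $(C_1\eps)^3$ bound on the squared energy difference. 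Your Gronwall argument is a valid alternative, though as written it is slightly loose (after absorbing the $y^2$ integral via the exponential factor, the remaining integral still contains $y(\tau)$, so you need either a $\sup$-trick or to invoke the bootstrap there too); either repair gives the stated conclusion.
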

\begin{proof}
Recall the energy estimate \eqref{eq2-10-oct-2025} and \eqref{eq6-10-oct-2025},
\begin{equation}
\aligned
& \Ebf^{\ME}_{\mu}(s_1,t^{-1}\widehat{L_a}\mathscr{Z}^J\Psi) 
- \Ebf^{\ME}_{\mu}(s_0,t^{-1}\widehat{L_a}\mathscr{Z}^J\Psi) 
+ \Ebf^{\Ccal}(s_0,s_1;t^{-1}\widehat{L_a}\mathscr{Z}^J\Psi)
\\
& \quad + \int_{s_0}^{s_1}\int_{\Mcal^{\ME}_s}
t^{-1}\big(\omega^{\mu}\zeta|t^{-1}\widehat{L_a}\mathscr{Z}^J\Psi|_{\del_t}^2\big) 
+ \la r-t\ra^{-1}\big(\omega^{\mu}\zeta|t^{-1}\widehat{L_a}\mathscr{Z}^J\Psi|_{\vec{\gamma}}^2\big)
\,\diff s
\\
& \lesssim_{\delta} (C_1\eps)^3,
\endaligned
\end{equation}
for $\ord(J) = \rank(J) = N$. Then we establish the uniform-in-time $L^2$ bounds on $t^{-1}\widehat{L_a}\mathscr{Z}^J\Psihat_a$.
\end{proof}

%------------------------------

Consequently, if we demand
\begin{equation}
C_1>2C_0,\quad \eps\leq \Big(\frac{C_1-2C_0}{2CC_1^{3/2}}\Big)^2 
\end{equation}
it follows that \eqref{eq3-11-oct-2025} is established.

Finally, we sketch the energy estimates on $\Psi$ and $\widehat{\del_{\delta}}\Psi$. These rely on the weighted $L^2$ estimates on $\Phi_{\mu}^I$ and $\Phi^I$ as we have done in \cite[Section~18]{PLF-YM-PDE} for real-valued field. The commutators are also decomposed into {\bf easy} terms and {\bf hard} terms. For simplicity in the discussion, we introduce the following classification. 
\begin{subequations}
\noindent{$\bullet$} ``Near-light-cone subcritical terms'' for $\Phi^I$:
\begin{equation}\label{eq2-15-oct-2025}
\aligned
& \zetab^{-1}\sum_{p_1+p_2\leq p\atop k_1+k_2\leq k}
\big([\del\Psi]_{p_1-1,k_1} + [\Psi]_{p_1-1,k_1}\big)\Big(\frac{\la r-t\ra}{r}|\del h|_{p_2,k_2} + |\delsN h|_{p_2,k_2}\Big),
\\
& \zetab^{-1}t^{-1}\sum_{p_1+p_2\leq p\atop k_1+k_2\leq k}[\Psi]_{p_1,k_1+1}|\del h|_{p_2,k_2},
\quad
\zetab^{-1}\Hcom_{p-1}[\Psi] + \zetab^{-1}\Hcom_{p-1}[\del\Psi] + \zetab^{-1}\Hwave_{p-1}[\Psi],
\endaligned
\end{equation}
\noindent{$\bullet$} ``Near-light-cone subcritical terms'' for $\Phi_{\mu}^I$:
\begin{equation}\label{eq3-15-oct-2025}
\aligned
& \zetab^{-1}\sum_{p_1+p_2\leq p\atop k_1+k_2\leq k}
\big([\del\Psi]_{p_1,k_1} + [\Psi]_{p_1,k_1}\big)
\Big(\frac{\la r-t\ra}{r}|\del h|_{p_2,k_2} + |\delsN h|_{p_2,k_2}\Big),
\\
&
\zetab^{-1}t^{-1}\hspace{-0.3cm}\sum_{p_1+p_2\leq p\atop k_1+k_2\leq k}
[\del\Psi]_{p_1,k_1+1}|\del h|_{p_2,k_2},
\quad
\zetab^{-1}\sum_{p_1+p_2\leq p\atop k_1+k_2\leq k}[\Psihat]_{p_1,k_1}|\del h|_{p_2,k_2},
\\
& \zetab^{-1}\Hcom_p[\Psi] + \zetab^{-1}\Hwave_p[\Psi].
\endaligned
\end{equation}

\noindent{$\bullet$} ``Near-light-cone absorbable terms'' for $\Phi^I$ and $\Phi_{\mu}^I$:
\begin{equation}
[\del\Psi]_{\vec{\gamma},p,k}\sum_{p_1+p_2\leq p\atop k_1+k_2\leq k}
[\Psi]_{\vec{\gamma},p_1,k_1}|\del h|_{p_2,k_2},
\quad
[\Psi]_{\vec{\gamma},p,k}\sum_{p_1+p_2\leq p\atop k_1+k_2\leq k}
[\Psi]_{\vec{\gamma},p_1,k_1}|\del h|_{p_2,k_2}
\end{equation}

\noindent{$\bullet$} ``Away-from-light-cone subcritical terms'' for $\Phi^I$ and $\Phi_{\mu}^I$,
\begin{equation}
\sum_{p_1+p_2\leq p\atop k_1+k_2\leq k}[\del\Psi]_{p_1,k_1}|\del h|_{p_2,k_2},
\quad
\Hcom_{p}[\Psi] + \Hwave_{p}[\Psi]
\end{equation}

\noindent{$\bullet$} ``Hard terms'' for $\Phi^I$ and $\Phi_{\mu}^I$:
\begin{equation}
\zetab^{-1}\sum_{p_1+p_2\leq p\atop k_1+k_2\leq k}\big([\del\Psi]_{p_1-1,k_1-1} + [\del\del\Psi]\big)|h|_{p_2+1,k_2+1}.
\end{equation}
\end{subequations}
The ``Hard terms'' correspond to the hard terms in \cite[Proposition~9.2]{PLF-YM-PDE}. The remaining terms correspond to the ``easy terms''.

Let us compare the above terms and the commutator in the scalar case. For the ``hard'' terms, we write them into the following form:
\begin{equation}\label{eq1-15-oct-2025}
\zetab^{-1}\sum_{p_1+p_2\leq p\atop k_1+k_2\leq k}[\del\Psi]_{p_1-1,k_1-1}|L h|_{p_2,k_2},
\quad
\zetab^{-1}[\del\Psi]_{p-1,k-1}|h|
\end{equation}
which enjoy the same hierarchy structure of $W^{\bf hard}_{p,k}$ there in \cite[Proposition~9.2]{PLF-YM-PDE}, except that there we have distinguished different components of $H$ due to the fact that only the components $\{rr, 0a, 00\}$ of the reference $h^{\star}$ therein enjoy a decay rate $r^{-1+\theta}$ ({\sl tame decay conditions}). But this decay rate is supposed for general components of reference in the present case (and this is valid for the vacuum solution  of \cite{LR2}), thus we do not need this distinction. Then the treatment of \eqref{eq1-15-oct-2025} is essentially the same as we have done in \cite[Proposition~18.2]{PLF-YM-PDE}. The reader may worry about the unpleasant weight $\zetab^{-1}$, however, it will not bring difficulty. Namely, if we consider \eqref{eq4-10-oct-2025(l)}, there is a standard weight $\zeta^2\zetab^{-1/2}\lesssim \zeta^{3/2}$. Then when one makes $L^2$ estimates on $[\del\Psi]_{p,k}$ or $[\Psi]_{p,k}$, it remains a $\zeta^{1/2}\leq \zetab^{1/2}$ which is exactly the weight in the energy density for spinor fields, see \eqref{equa-28-sept-2025a} (This is also the case in \cite{PLF-YM-PDE}, where the weight in the energy sour term $J\zeta^{-1}$ gives a $\zeta$, which is exactly the weight demanded in the scalar energy estimate). When one does $L^2$ estimates on $|L u|$, one can always treat regularity for the weight $\zetab$ by Proposition~\ref{prop1-24-july-2025}. In practical, this means for example that one can  always apply \eqref{eq5-04-oct-2025} instead of \eqref{eq5-04-oct-2025} to obtain one order $\zetab$, provided that $N$ sufficiently large.

For the ``near-light-cone subcritical terms'', we remark that all terms in \eqref{eq2-15-oct-2025} and the ones other than the third in \eqref{eq3-15-oct-2025} are essentially the same to $W^{\bf easy}_{p,k}$ defined in \cite[Proposition~9.2]{PLF-YM-PDE}. The key structure is that, for metric, we have the standard factor $\frac{\la r-t\ra}{r}$ for general components of the gradient, and the conical derivatives $\delsN h$, which enjoy a decay faster that $t^{-1}$, while in  $W^{\bf easy}_{p,k}$ if we apply the wave coordinate conditions, we also turn $\del h^{\Ncal00}$ into $|\delsN h|$ plus high-order corrections. We refer the interested reader to~\cite[Lemma~11.13, Lemma~11.14]{PLF-YM-PDE}. 

The ``away-from-light-cone subcritical terms'' are integrated in $\Mcal^{\ME}\cap\{r\geq4t/3\}$, in which
\be
\la r-t\ra^{-1}\lesssim r^{-1}.
\ee
Thus $|\del h|$ enjoy sufficient decay, which turn the estimates on these terms trivial.

Finally, we arrive the essential new terms, which are ``near-light-cone absorbable'' terms and the third one in \eqref{eq3-15-oct-2025}.
\begin{lemma}
Under the bootstrap assumptions, the following energy estimate holds for $p_1+p_2\leq p,k_1+k_2\leq k$:
\begin{equation}
\big\|\omega^{\mu}\zeta^2\zetab^{-3/2}[\Psihat]_{p_1,k_1}|\del h|_{p_2,k_2}\big\|_{L^2(\Mcal^{\ME}_s)}
\lesssim 
\begin{cases}
(C_1\eps)^2 s^{-2+\delta},\quad & p\leq N,
\\
(C_1\eps)^2 s^{-4+\delta},\quad & p\leq N-5.
\end{cases}
\end{equation}
Thus
\begin{equation}\label{eq5-15-oct-2025}
\aligned
& \int_{s_0}^{s_1} s\Ebf_{\mu}^{\ME,p,k}(s,\Psi)^{1/2}\big\|\omega^{\mu}\zeta^2\zetab^{-3/2}[\Psihat]_{p_1,k_1}|\del h|_{p_2,k_2}\big\|_{L^2(\Mcal^{\ME}_s)} \,\diff s
\\
& \lesssim
\begin{cases}
(C_1\eps)^3s_1^{1+2\delta},\quad &p\leq N,
\\
(C_1\eps)^3 ,\quad &p\leq N-5.
\end{cases}
\endaligned
\end{equation}
\end{lemma}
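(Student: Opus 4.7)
The plan is to estimate the weighted $L^2$ norm by a standard high--low Leibniz split, after first reducing to the near-light-cone slab $\Mcal^{\near}_s = \Mcal^{\ME}_s\cap\{r\le 4t/3\}$: outside this slab, $\zetab\simeq 1$ and $\la r-t\ra\simeq r+t\gtrsim s^2$, so the stated decay follows routinely from \eqref{eq4-03-oct-2025} and \eqref{eq6-04-oct-2025} paired against the bootstrap energies \eqref{eq1-02-oct-2025-H}, \eqref{eq1-02-oct-2025-L}. Inside $\Mcal^{\near}_s$, the degenerate weight $\zeta^{3/2-\delta}\zetab^{-3/2}\le 1$ (since $\zeta\le \zetab$), so the essential task is to dominate $[\Psihat]_{p_1,k_1}\,|\del h|_{p_2,k_2}$ in $L^2$ against the spinor-energy weight $\omega^{\mu}\zeta$.

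Two Leibniz branches must be handled. In the low-$\Psihat$/high-$h$ branch $p_1\le [p/2]$, I combine \eqref{eq2-13-oct-2025} with the Sobolev decay \eqref{eq6-04-oct-2025} to get
\[
[\Psihat]_{p_1,k_1}\;\lesssim\;t^{-1}[\Psi]_{p_1+1,k_1+1}\;\lesssim_{\delta}\;C_1\eps\,t^{-1}\zeta^{1/2-\delta}\la r-t\ra^{1-\mu}r^{-2}s^{2\delta},
\]
which is placed in $L^\infty$, while $|\del h|_{p_2,k_2}$ is paired against the metric energy $\Ebf_{\kappa}^{\ME,N}(s,u)^{1/2}$ through the standard density $\omega^{\kappa}\zeta\,|\del h|$. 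In the complementary branch $p_2\le [p/2]$, the pointwise decay \eqref{eq4-03-oct-2025} gives $|\del h|_{p_2,k_2}\lesssim C_1\eps\,r^{-1}\la r-t\ra^{-\kappa}s^{\delta}$, placed in $L^\infty$, and the surviving $[\Psihat]_{p_1,k_1}$ is placed in $L^2$ via Proposition~\ref{prop1-15-oct-2025}: when $k_1=p_1=N$ we use the top-rank bound \eqref{eq4-11-oct-2025}, and when $k_1<p_1$ we use the non-maximal-rank bound \eqref{eq5-11-oct-2025}, whose extra factor of $t$ is exactly compensated by the $r^{-1}\simeq t^{-1}$ arising on the $h$-side in $\Mcal^{\near}_s$. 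In both branches the combined weights, together with $r\gtrsim s^2$ valid on $\Mcal^{\ME}_s$, produce the claimed pointwise-in-$s$ decay $(C_1\eps)^2 s^{-2+\delta}$ at top order; the improvement to $(C_1\eps)^2 s^{-4+\delta}$ for $p\le N-5$ comes from replacing the top-order bootstrap \eqref{eq1-02-oct-2025-H} by the low-order bootstrap \eqref{eq1-02-oct-2025-L} throughout the argument.

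The integrated bound \eqref{eq5-15-oct-2025} then follows immediately: by Cauchy--Schwarz in time applied against $s\,\Ebf_{\mu}^{\ME,p,k}(s,\Psi)^{1/2}$, which the bootstrap bounds \eqref{eq1-02-oct-2025-H}--\eqref{eq1-02-oct-2025-L} control by $C_1\eps\,s^{1+\delta}$ (top order) or $C_1\eps\,s^{\delta}$ (low order), we integrate the resulting power $s\cdot s^{1+\delta}\cdot s^{-2+\delta}=s^{2\delta}$ (respectively $s\cdot s^{\delta}\cdot s^{-4+\delta}=s^{-3+2\delta}$) against $\diff s$, giving $(C_1\eps)^3 s_1^{1+2\delta}$ at top order and a uniform $(C_1\eps)^3$ bound for $p\le N-5$.

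The main obstacle is the top-order borderline configuration $p_1+p_2=N$ with \emph{both} factors at maximal rank, where neither the $L^\infty$ control of $[\Psihat]_{p_1,k_1}$ via \eqref{eq6-04-oct-2025} nor the naive $L^2$ control via $[\Psihat]_N\lesssim t^{-1}[\Psi]_{N+1}$ is available (the latter would exceed the available regularity budget $N+1>N$). This is precisely the scenario that motivated the additional top-order bootstrap assumption \eqref{eq1-11-oct-2025} on $t^{-1}\widehat{L_a}(\mathscr{Z}^J\Psi)$ and its reinforcement in Proposition~\ref{prop1-15-oct-2025}, which provides a genuine $L^2$ bound on $[\Psihat]_N$ without appeal to the $(N{+}1)$-th order energy of $\Psi$. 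Once Proposition~\ref{prop1-15-oct-2025} is in hand, the rest of the argument is the standard weight-tracking that is by now routine within the Euclidean--hyperboloidal framework of~\cite{PLF-YM-PDE}, adapted in the obvious way to spinor fields.
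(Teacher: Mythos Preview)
Your approach is essentially the same as the paper's: a Leibniz split on the product, pointwise decay on the low-order factor, and Proposition~\ref{prop1-15-oct-2025} to control $[\Psihat]_{p_1,k_1}$ in $L^2$ when it carries top order. You correctly identify that Proposition~\ref{prop1-15-oct-2025} (and behind it the extra bootstrap assumption~\eqref{eq1-11-oct-2025}) is the key ingredient that resolves the would-be $(N{+}1)$-order obstruction.

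There is one imprecision worth flagging. In your low-$\Psihat$/high-$h$ branch you claim that $|\del h|_{p_2,k_2}$ is ``paired against the metric energy $\Ebf_{\kappa}^{\ME,N}(s,u)^{1/2}$ through the standard density $\omega^{\kappa}\zeta|\del h|$''. But the metric energy controls only $|\del u|$, not $|\del h|=|\del u|+|\del\Hreff|$; the weighted $L^2$ norm of the reference piece $\omega^\kappa\zeta|\del\Hreff|_{p_2}$ is not part of any bootstrap energy (and in fact diverges over the full $\Mcal^{\ME}_s$). The paper avoids this by first decomposing $|\del h|\le |\del\Hreff|+|\del u|$ and handling the reference piece \emph{uniformly} for all $(p_1,p_2)$: since $|\del\Hreff|_{p_2}$ has pointwise decay \eqref{eq1-03-oct-2025} at every order $p_2\le N+1$, one can always place it in $L^\infty$ and pair $[\Psihat]_{p_1}$ against the $L^2$ bound from Proposition~\ref{prop1-15-oct-2025}, exactly as in your high-$\Psihat$ branch. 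Once you insert this split, your argument closes; without it, the low-$\Psihat$/high-$h$ branch as written is incomplete. The paper also uses the asymmetric thresholds $p_2\le N-3$ versus $p_1\le N-9$ (matching the available Sobolev ranges) rather than a symmetric $[p/2]$ split, but that is only cosmetic.
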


\begin{proof}
The estimate relies on Proposition~\ref{prop1-15-oct-2025}. In fact, let
$$
[\Psihat]_{p_1,k_1}|\del h|_{p_2,k_2} \leq [\Psihat]_{p_1,k_1}|\del \Hreff|_{p_2,k_2} 
+ [\Psihat]_{p_1,k_1}|\del u|_{p_2,k_2} = : T_1 + T_2.
$$
For the term $T_1$, we recall \eqref{eq1-03-oct-2025}
$$
\aligned
\big\|\omega^{\mu}\zeta^2\zetab^{-3/2}|T_1|\big\|_{L^2(\Mcal^{\ME}_s)}
& \lesssim 
\begin{cases}
C_1\eps s^{-2+2(N+1)\theta}\big\|\zeta^{1/2}\omega^{\mu}[\Psihat]_{p}\big\|_{L^2(\Mcal^{\ME}_s)},
\quad&  p\leq N,
\\
C_1\eps s^{-4+2(N+1)\theta}\big\|\zeta^{1/2}\omega^{\mu}[\Psihat]_p\big\|_{L^2(\Mcal^{\ME}_s)},
\quad &p\leq N-5,
\end{cases}
\\
& \lesssim 
\begin{cases}
(C_1\eps)^2 s^{-2+2(N+1)\theta},\quad &p\leq N,
\\
(C_1\eps)^2 s^{-4+2(N+1)\theta},\quad &p\leq N-5.
\end{cases}
\endaligned
$$

For the term $T_2$, we need to remark that when $p_2\leq N-3$, we apply \eqref{eq3-03-oct-2025} on $|\del u|_{p_2,k_2}$ and obtain
$$
\aligned
\big\|\omega^{\mu}\zeta^2\zetab^{-3/2}|T_2|\big\|_{L^2(\Mcal^{\ME}_s)}
\lesssim &
\begin{cases}
C_1\eps s^{-2+\delta}\big\|\zeta^{1/2}\omega^{\mu}[\Psihat]_p\big\|_{L^2(\Mcal^{\ME}_s)},\quad p\leq N,
\\
C_1\eps s^{-4+\delta}\big\|\zeta^{1/2}\omega^{\mu}[\Psihat]_p\big\|_{L^2(\Mcal^{\ME}_s)},\quad p\leq N-5,
\end{cases}
\\
& \lesssim 
\begin{cases}
(C_1\eps)^2s^{-2+\delta},\quad &p\leq N,
\\
(C_1\eps)^2s^{-4+\delta},\quad &p\leq N-5.
\end{cases}
\endaligned
$$
When $p_1\leq N-9$, we recall \eqref{eq2-13-oct-2025} and \eqref{eq6-04-oct-2025},
$$
\aligned
\big\|\omega^{\mu}\zeta^2\zetab^{-3/2}|T_2|\big\|_{L^2(\Mcal^{\ME}_s)}
& \lesssim 
C_1\eps s^{-4-2\kappa+2\delta}\big\|\zeta^{1/2}\omega^{\kappa}|\del u|_{N}\big\|_{L^2(\Mcal^{\ME}_s)}
\lesssim (C_1\eps)^2s^{-4+\delta}.
\endaligned
$$
Recalling \eqref{eq4-15-oct-2025}, we arrive the desired estimate.
\end{proof}
\begin{remark}
From \eqref{eq4-10-oct-2025(l)} of Proposition~\ref{prop1-05-oct-2025}, we observe that the contribution of this term in the energy estimate is bounded by \eqref{eq5-15-oct-2025}. Remark particularly that this brings an $s^{1/2 + \delta}$ increasing rate on $\Ebf_{\mu}^{\ME,p,k}(s,\Psi)^{1/2}$ when $p\geq N-4$. But it is still weaker than the bootstrap assumption which is $s^{1+\delta}$. When $p\leq N-5$, the uniform contribution of this term is still benign as far as the $s^{\delta}$ increasing rate in \eqref{eq1-02-oct-2025-L} is concerned.
\end{remark}

%------------------------------------------------------------

Then we turn our attention to the ``near-light-cone absorbable'' terms.

\begin{lemma}
Under the bootstrap assumptions, the following energy estimate holds for $p_1+p_2\leq p,k_1+k_2\leq k$:
\begin{equation}\label{eq6-15-oct-2025}
\aligned
& \int_{s_0}^{s_1}s\int_{\Mcal^{\near}_s}\omega^{2\mu}\zeta^2
\big([\del\Psi]_{\vec{\gamma},p,k} + [\Psi]_{\vec{\gamma},p,k}\big)
[\Psi]_{\vec{\gamma},p_1,k_1}|\del h|_{p_2,k_2}\,\diff x\diff s
\\
& \lesssim_{\delta} C_1\eps\int_{s_0}^{s_1}s\int_{\Mcal^{\ME}_s}
\frac{(\omega^{\mu}\zeta[\del\Psi]_{\vec{\gamma},p,k})^2 + (\omega^{\mu}\zeta[\Psi]_{\vec{\gamma},p,k})^2}{\la r-t\ra} \diff x\diff s + (C_1\eps)^3.
\endaligned
\end{equation}
\end{lemma}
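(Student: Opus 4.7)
The plan is to prove \eqref{eq6-15-oct-2025} by the classical absorption argument: use a weighted Young inequality with parameter $C_1\eps$ to split the triple product, sending one half into the ``absorbable'' bulk integral on the right-hand side and estimating the remaining cubic half by the available sharp decay of $|\del h|$ and the spacetime bulk bound on $[\Psi]_{\vec{\gamma}}$ furnished by Proposition~\ref{prop1-05-oct-2025}. Concretely, I would apply
\begin{equation*}
AB \leq \frac{C_1\eps}{2\la r-t\ra}A^2 + \frac{\la r-t\ra}{2C_1\eps}B^2
\end{equation*}
with $A = \omega^{\mu}\zeta\big([\del\Psi]_{\vec{\gamma},p,k}+[\Psi]_{\vec{\gamma},p,k}\big)$ and $B = \omega^{\mu}\zeta\,[\Psi]_{\vec{\gamma},p_1,k_1}|\del h|_{p_2,k_2}$. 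The $A^2$-contribution integrated against $s\,\diff x\diff s$ produces exactly the first term in the right-hand side of \eqref{eq6-15-oct-2025}, so it remains to verify
\begin{equation*}
\frac{1}{C_1\eps}\int_{s_0}^{s_1}s\int_{\Mcal^{\near}_s}\omega^{2\mu}\zeta^2\la r-t\ra\,[\Psi]_{\vec{\gamma},p_1,k_1}^2\,|\del h|_{p_2,k_2}^2\,\diff x\diff s \lesssim_{\delta}(C_1\eps)^3.
\end{equation*}

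The cubic remainder is treated by a standard low/high splitting depending on which index is at most $[N/2]$. Since $p_1+p_2\leq p\leq N$, at least one of the two factors is of order at most $[N/2]\le N-10$ (recall $N\ge 22$). In the case $p_2\leq N-3$, I would invoke the sharp pointwise bound \eqref{eq4-03-oct-2025}, namely $|\del h|_{p_2,k_2}\lesssim C_1\eps \la r-t\ra^{-\kappa}r^{-1}s^{\delta}$, which reduces the integrand to $(C_1\eps)^2\la r-t\ra^{1-2\kappa}r^{-2}s^{2\delta}\,\omega^{2\mu}\zeta^2[\Psi]_{\vec{\gamma},p_1}^2$; using $r\gtrsim s^2$ in $\Mcal^{\ME}$ and absorbing $\la r-t\ra^{2-2\kappa}r^{-2}\lesssim r^{-2\kappa}\lesssim s^{-4\kappa}$, the factor of $s^{-3+2\delta-4\kappa+1}=s^{-2+2\delta-4\kappa}$ produced by combining with the outer $s\cdot\la r-t\ra$ is then paired with the spacetime bulk bound on $\int s\int\la r-t\ra^{-1}(\omega^{\mu}\zeta[\Psi]_{\vec{\gamma}})^2$ coming from the energy identity \eqref{eq4-10-oct-2025(l)} of Proposition~\ref{prop1-05-oct-2025} applied to $\mathscr{Z}^J\Psi$; since $\kappa>1/2$, the temporal weight is integrable and the whole contribution is $\lesssim (C_1\eps)^4$. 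In the complementary case $p_1\leq [N/2]$, I would instead apply the spinorial Sobolev inequality \eqref{eq10-04-oct-2025} (which requires the Sobolev-related conditions verified in Lemma~\ref{lem1-03-oct-2025}) to convert $[\Psi]_{\vec{\gamma},p_1}$ to the pointwise bound $\lesssim C_1\eps\,\zeta^{1/2-\delta}\la r-t\ra^{-\mu}r^{-1}s^{\delta}$ (using $[\Psi]_{\vec{\gamma}}^2\lesssim |\vec{\gamma}|_{\vec n}[\Psi]_{\vec n}^2$ and the light-bending bound $|\vec{\gamma}|_{\vec n}\lesssim 1$ from \eqref{eq6-06-oct-2025}), and pair it with the $L^2$ energy bound on $|\del h|_{p_2,k_2}$ from \eqref{eq1-02-oct-2025-H}.

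The main obstacle, though relatively mild here, is the passage from $|\cdot|_{\vec{\gamma}}$ to $|\cdot|_{\vec n}$: $\vec{\gamma}$ is not a unit timelike vector but rather has $g(\vec{\gamma},\vec{\gamma})=H^{\Ncal 00}<0$ of size $\eps r^{-1+\theta}$ under the light-bending assumption \eqref{eq6-02-oct-2025}, so one must be careful not to lose a destructive factor of $\zeta^{-1}$ or $\zetab^{-1}$ in the Cauchy--Schwarz type inequality $|\la\Psi,\vec{\gamma}\cdot\Psi\ra_{\ourD}|\lesssim |\vec{\gamma}|_{\vec n}|\Psi|_{\vec n}^2$. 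The remedy is to never convert the high-order factor: it stays as $[\Psi]_{\vec{\gamma}}$ and is absorbed via the bulk spacetime estimate, while only the low-order factor is converted---and only after first exchanging it with the pointwise Sobolev bound. Choosing the Cauchy--Schwarz parameter to be precisely $C_1\eps$ is what produces the required $C_1\eps$ prefactor in the absorbable term on the right-hand side of \eqref{eq6-15-oct-2025}, leaving the cubic remainder at size $(C_1\eps)^3$, as needed.
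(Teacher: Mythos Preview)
Your approach (Young's inequality with parameter $C_1\eps$, then bound the $B^2$ remainder directly to $(C_1\eps)^3$) differs from the paper's and contains a genuine gap. In the case $p_2\le N-3$, after bounding $|\del h|_{p_2}^2$ pointwise you invoke ``the spacetime bulk bound on $\int s\int \la r-t\ra^{-1}(\omega^{\mu}\zeta[\Psi]_{\vec\gamma})^2$ coming from the energy identity \eqref{eq4-10-oct-2025(l)}'' to close the $B^2$-term. This is circular: that spacetime bulk integral sits on the \emph{left} of \eqref{eq4-10-oct-2025(l)}; it is not among the bootstrap assumptions \eqref{eq1-02-oct-2025}, and controlling it is precisely the purpose of the absorption step (see the remark immediately after the lemma) in which the present lemma participates. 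The fix is easy---after your manipulations the $B^2$-term is in fact $\lesssim C_1\eps\int_{s_0}^{s_1} s\int \la r-t\ra^{-1}(\omega^{\mu}\zeta[\Psi]_{\vec\gamma,p,k})^2$ (using $s^{1+2\delta-4\kappa}\le s$ and $[\Psi]_{\vec\gamma,p_1}\le[\Psi]_{\vec\gamma,p,k}$), so it belongs to the \emph{first} term on the right of \eqref{eq6-15-oct-2025}, not to $(C_1\eps)^3$---but this is not what you claim.

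A secondary issue: in the case $p_1\le [N/2]$ you assert $|\vec\gamma|_{\vec n}\lesssim 1$. By Lemma~\ref{lem2-13-july-2025} one only has $|\vec\gamma|_{\vec n}\lesssim \zetab^{-1}$ (since $\vec\gamma=\mathrm{grad}(r-t)$ has $O(1)$ components in the natural frame), and $\zetab^{-1}$ is unbounded in the merging region. The missing $\zetab^{-1/2}$ in the conversion $[\Psi]_{\vec\gamma}\lesssim\zetab^{-1/2}[\Psi]_{\vec n}$ is absorbable by the $\zeta^{1/2-\delta}$ from Sobolev, but the claim as written is false. The paper avoids both issues by bounding $|\del h|_{p_2}$ pointwise \emph{first} (when $p_2\le N-3$) and then applying Young with weight~$1$ to the two $\vec\gamma$-factors $[\del\Psi]_{\vec\gamma,p,k}$ and $[\Psi]_{\vec\gamma,p_1,k_1}$, sending \emph{both} squares into the absorbable bulk term with coefficient $C_1\eps s^{-2\kappa+\delta}\le C_1\eps$; the $(C_1\eps)^3$ contribution arises only in the complementary case $p_1\le N-9$, from the \emph{slice} energy $\Ebf_\kappa^{\ME,p,k}(s,u)$ via the bootstrap, never from a spacetime bulk quantity.
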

\begin{proof}
We remark that, when $p_2\leq N-3$, we apply \eqref{eq4-03-oct-2025} and obtain
$$
\aligned
& \int_{\Mcal^{\near}_s}\omega^{2\mu}\zeta^2[\del\Psi]_{\vec{\gamma},p,k}
[\Psi]_{\vec{\gamma},p_1,k_1}|\del h|_{p_2,k_2}\,\diff x
\\
& \lesssim C_1\eps s^{\delta}\int_{\Mcal^{\near}_s}r^{-1}\omega^{2\mu- \kappa}\zeta^2 [\del\Psi]_{\vec{\gamma},p,k}
[\Psi]_{\vec{\gamma},p_1,k_1}\,\diff x
\\
& \lesssim C_1\eps s^{\delta}\int_{\Mcal^{\near}_s} r^{-1}\la r-t\ra^{1- \kappa}
(\la r-t\ra^{-1/2}\omega^{\mu}\zeta[\del\Psi]_{\vec{\gamma},p,k})\,
(\la r-t\ra^{-1/2}\omega^{\mu}\zeta[\Psi]_{\vec{\gamma},p,k})
\\
& \lesssim C_1\eps s^{\delta} \int_{\Mcal^{\near}_s}r^{-1}\la r-t\ra^{1- \kappa}
(\la r-t\ra^{-1/2}\omega^{\mu}\zeta[\del\Psi]_{\vec{\gamma},p,k})^2 \diff x
\\
& \quad+ C_1\eps s^{\delta}\int_{\Mcal^{\near}_s}r^{-1}\la r-t\ra^{1- \kappa}
(\la r-t\ra^{-1/2}\omega^{\mu}\zeta[\Psi]_{\vec{\gamma},p,k})^2 \diff x
\\
& \lesssim C_1\eps s^{-2\kappa+\delta}\int_{\Mcal^{\near}_{s}}
\frac{(\omega^{\mu}\zeta[\del\Psi]_{\vec{\gamma},p,k})^2 + (\omega^{\mu}\zeta[\Psi]_{\vec{\gamma},p,k})^2}{\la r-t\ra}\,\diff x.
\endaligned
$$

When $p_1\leq N-9$, we remark that $\del h = \del \Hreff + \del u$. The integral
$$
\int_{\Mcal^{\near}_s}\omega^{2\mu}\zeta^2 [\del\Psi]_{\vec{\gamma},p,k}
[\Psi]_{\vec{\gamma},p_1,k_1}|\del \Hreff|_{p_2,k_2}\,\diff x
$$ 
is bounded in the same manner as above by applying \eqref{eq1-03-oct-2025} on $|\del \Hreff|$ (which is better than the decay of $|\del h|$ in the above estimate). We omit the details. Then we focus on the part concerning $\del u$. Thanks to \eqref{eq6-04-oct-2025},
$$
\aligned
& \int_{\Mcal^{\near}_s}\omega^{2\mu}\zeta^2 [\del\Psi]_{\vec{\gamma},p,k}
[\Psi]_{\vec{\gamma},p_1,k_1}|\del u|_{p_2,k_2}\,\diff x
\\
& \lesssim C_1\eps s^{2\delta}\int_{\Mcal^{\near}_s}\zeta^{1/2- \delta}\la r-t\ra^{1- \mu}r^{-2}\omega^{2\mu}\zeta^2 [\del\Psi]_{\vec{\gamma},p,k}
|\del u|_{p_2,k_2}\,\diff x
\\
& \lesssim C_1\eps s^{2\delta}\int_{\Mcal^{\near}_s}r^{-2}\la r-t\ra^{3/2- \kappa}\big(\la r-t\ra^{-1/2}\omega^{\mu}\zeta[\del\Psi]_{\vec{\gamma},p,k}\big)\big(\omega^{\kappa}\zeta|\del u|_{p,k}\big)\,\diff x
\\
& \lesssim C_1\eps s^{-1-2\kappa + 2\delta}\int_{\Mcal^{\near}_s}\frac{(\omega^{\mu}\zeta[\del\Psi]_{\vec{\gamma},p,k})^2}{\la r-t\ra}\,\diff x
+C_1\eps s^{-1-2\kappa + 2\delta}\Ebf_{\kappa}^{\ME,p,k}(s,u).
\endaligned
$$
The last term leads in the right-hand side leads us to the integrable bound $(C_1\eps)^3$ in the right-hand side of \eqref{eq6-15-oct-2025}.
\end{proof}

\begin{remark}
When we consider the energy contribution of the above ``near-light-cone absorbable'' terms, we recall \eqref{eq4-10-oct-2025(l)}. When $C_1\eps$ sufficiently small, the first term in the right-hand side of \eqref{eq6-15-oct-2025} can be absorbed by the third term in the left-hand side of \eqref{eq4-10-oct-2025(l)}. So the final contribution of this term is also a uniform in time term $(C_1\eps)^3$ which is also benign as far as  the bootstrap assumptions are concerned.
\end{remark}

%=============================================================================

\clearpage

\part{Technical material}
\label{part-three}

\section{Spin structure and algebraic properties} 
\label{section=N17}

\subsection{ Basic concepts} 
\label{section=31}

{

\paragraph{Aim of this section.}

We develop here the fundamental concepts of calculus on spinor bundles in a \emph{curved spacetime} with trivial tangent bundle, with the ultimate goal of defining and analyzing Dirac spinor fields. We follow the presentation and (mostly) the notation in Hamilton's textbook~\cite{Hamilton-2017}.  In this Section~\ref{section=31}, we first present the necessary algebraic framework, including the Lorentz Clifford algebra and the associated gamma matrices, which underpin the notion of spinors. The geometric setting relies on the notion of  \emph{principal bundles}, whose fibers correspond to local Lorentz or spin frames, and we explain how to lift these structures to the spin group \(\mathrm{Spin}_{3,1}^+\). In particular, we clarify the transition between different global frames and demonstrate the equivalence between the corresponding spin bundles (cf.~Proposition~\ref{prop1-19-dec-2024}). We also present the key notion of a \emph{spin structure}, which is essential for formulating covariant derivatives of spinor fields in curved spacetimes. Two different global orthonormal frames generally yield different spin bundles with different spin connections. However, the canonical isomorphism described below ensures that these bundles are \emph{equivalent} in a precise sense, allowing us, in applications, to \emph{switch from one frame to another} without altering the underlying spin structure. These results will be used extensively to investigate the main analytic properties of spinor fields and their interactions with the underlying geometry. 

%------------------------------------------------------------------------ 

\paragraph{Lorentz Clifford algebra.}

We begin with a presentation of the fundamental Lorentz Clifford structure. Let $(\RR^{3,1}, \eta)$ be Minkowski spacetime equipped with its standard (flat) metric \(\eta\) of signature \((-,+,+,+)\).  Let \((e_0, e_1, e_2, e_3)\) be the canonical orthonormal basis, so that \(\eta(e_i,e_j) = \eta_{ij} \) is given by
\be
\eta_{00} = -1, 
\qquad 
\eta_{\ih\ih} = 1 \;\text{ for } \ih=1,2,3,
\qquad 
\eta_{\ih \jh} = 0 \;\text{ for } \ih \neq \jh.
\ee
When dealing with spinors, it is convenient\footnote{Since we will adopt a gauge-independent presentation this convention is of limited use.}
to adopt the convention that indices $i,j,k,l$ run over $\{0,1,2,3\}$, while indices $\ih,\jh,\kh,\lh$ run over $\{1,2,3\}$. Let us now introduce\footnote{Although these explicit matrices are not used directly, they help illustrate the abstract presentation adopted later in this section.}
the following \(4\times 4\) \emph{gamma matrices} (in the so-called chiral representation): 
\be
\gamma_0 = 
\begin{pmatrix}
0    & I_2 \\
I_2  & 0 
\end{pmatrix},
\qquad
\gamma_i
= 
\begin{pmatrix}
0       & - \sigma_i \\[6pt]
\sigma_i & 0        
\end{pmatrix},
\quad 
i=1,2,3,
\ee
where \(\sigma_1, \sigma_2, \sigma_3\) denote the standard \(2\times2\) Pauli matrices, defined as 
\be
\sigma_1 = 
\begin{pmatrix}
0 & 1 \\
1 & 0
\end{pmatrix},
\quad
\sigma_2 = 
\begin{pmatrix}
0 & - \,\mathrm{i} \\
\mathrm{i} &  0
\end{pmatrix},
\quad
\sigma_3 = 
\begin{pmatrix}
1 & 0 \\
0 & -1
\end{pmatrix}, 
\ee
where $\mathrm{i}^2 = -1$. The famous anti-commutativity property of these matrices is
\bel{equa-29D} 
\aligned
& \{\gamma_i, \gamma_j\} 
= \gamma_i\,\gamma_j + \gamma_j\,\gamma_i
= -2\,\eta_{ij} \,I_4  \qquad 
&&(i,j=0,1,2,3), 
\qquad
\\
& \gamma_0^{\dagger} = \gamma_0,
\qquad \qquad \qquad 
\gamma_{\ih}^{\dagger} = - \,\gamma_{\ih} \quad
&& (\ih=1,2,3),
\endaligned
\ee
where \(I_2\) and \(I_4\) denote the \(2\times 2\) and \(4\times 4\) identity matrices, respectively, and  the ${}^\dagger$ symbol denotes the
Hermitian transpose of a matrix.  The real algebra generated by these four matrices \(\{\gamma_i\} \) forms a sub-algebra of \(\mathrm{End}(\mathbb{C}^4)\), which is precisely the \emph{real Clifford algebra} \(\mathrm{Cl}(3,1)\) defined over $\RR$ and
represented over \(\mathbb{C}^4\). Moreover, the complexification of the real Clifford algebra is isomorphic to $M_4(\mathbb{C})$, the algebra of all $4\times 4$ complex matrices, which may be stated as 
$\mathrm{Cl}(3,1)\,\otimes_{\mathbb{R}} \,\mathbb{C}
\cong M_4(\mathbb{C})$;  see~\cite[Section 6.3.4]{Hamilton-2017}.

%---------------------------------------------------------
\bse
Furthermore, with this notation, we consider the \emph{Gamma linear map}
\begin{equation} \label{equa-map}
\gamma : \RR^{3,1} \to  \mathrm{Cl}(3,1),
\quad
v = v^i\,e_i \mapsto \gamma(v) = v^i\,\gamma_i, 
\end{equation}
which induces an action of Minkowski space \(\RR^{3,1} \) on \(\mathbb{C}^4\) (in this context referred to as the space of spinors), defined by
\be
v\cdot \psi = v^i\,\gamma_i\,\psi,
\qquad
\psi \in \mathbb{C}^4,
\ee
referred to as the \emph{Clifford multiplication.} Observe that $v\cdot \psi$ is a spinor in the same spinor representation space. 
\ese
%

%-----------------------------------------------------------------------

\paragraph{Proper orthochronous Lorentz spin group.}

Let $\mathrm{SO}_{3,1}^+$ denote the group of matrices of $\text{GL}(\RR^{3,1})$ that preserves the metric $\eta$, as well as the time-orientation and the spatial orientation;  this subgroup is referred to as the \emph{proper orthochronous Lorentz group}. By definition, the \emph{spin group}, denoted by 
\(\mathrm{Spin}^+(3,1)\),  
is the double cover of \(\mathrm{SO}^+_{3,1} \); we now explain this construction. First of all, we introduce the hyperboloidal hypersurfaces
\be
\ourS_{3,1}^+ := \{v\in \RR^{3,1}| \eta(v,v) =1\}, \qquad \ourS_{3,1}^- := \{v\in \RR^{3,1}|\eta(v,v) = -1\},
\ee
which consist, respectively, of suitably normalized, spacelike and timelike vectors. Next, we introduce the \emph{orthochronous Lorentz spin group\footnote{The plus sign in $\mathrm{Spin}_{3,1}^+$ indicates the connected component containing the identity element in $\mathrm{SO}^+(3,1)$.}} over Minkowski spacetime $\RR^{3,1}$
\be
\mathrm{Spin}_{3,1}^+ = \Big\{ \gamma(v_1) \gamma(v_2) \ldots \gamma(v_{2p})  \gamma(w_1) \gamma(w_2)\ldots \gamma(w_{2q})
\, \Big / \, v_m\in \ourS_{3,1}^+, w_n\in \ourS_{3,1}^- \Big\}, 
\ee
which consists of the frames preserving both orientations, and forms a group under the multiplication inherited from $\text{Cl}(3,1)$. To check this group property, we observe that  for $v,w\in \RR^{3,1}$ and with the short-hand notation $v\cdot w = \gamma(v) \gamma(w)$ and applying the anti-commutation rules \eqref{equa-29D} yields $v\cdot w = -2\eta(v,w) \textrm{Id} - w\cdot v$. Furthermore, for any $v\in \ourS_{3,1}^+$ and $w\in \ourS_{3,1}^-$ we have 
$$
\aligned
&v\cdot v = - \eta(v,v)  \textrm{Id}= - \textrm{Id} \quad \Rightarrow \quad v^{-1} = -v,
\\
&w\cdot w = - \eta(w,w)  \textrm{Id} =  \textrm{Id} \quad \Rightarrow \quad w^{-1} = w.
\endaligned
$$
It is clear also that $\mathrm{Spin}_{3,1}^+$ is a Lie group —that is, a group endowed with a smooth manifold structure— which is connected and simply connected. Moreover, it double-covers $\mathrm{SO}_{3,1}^+$ via a two-fold covering map $\lambda: \mathrm{Spin}_{3,1}^+ \to \mathrm{SO}_{3,1}^+$; its kernel is given by $\big\{ \pm I_4 \big\}$, defined in \eqref{eq2-18-dec-2024}, below. Furthermore, the following property holds. 

\begin{claim}[cf. Hamilton {\cite[Proposition 6.5.22]{Hamilton-2017}}]
The Lie algebra of $\mathrm{Spin}_{3,1}^+$ (i.e., the tangent space at the identity)  is the subspace of the real Clifford algebra $\mathrm{Cl}(3,1)$ given by 
$$
\mathfrak{spin}_{3,1}^+ = \mathrm{\bf span} \big\{\gamma_i\gamma_j \, \big| \, 0\leq i \leq j \leq 3 \big\} \subset \mathrm{Cl}(3,1). 
$$
\end{claim}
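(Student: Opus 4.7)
The strategy rests on three ingredients: a dimension count via the covering map, an explicit construction of curves in $\mathrm{Spin}_{3,1}^+$ realizing each basis element as a velocity vector at the identity, and the well-known linear independence of products $\gamma_i\gamma_j$ in the Clifford algebra. With these in hand, the identification of $\mathfrak{spin}_{3,1}^+$ with $\mathrm{\mathbf{span}}\{\gamma_i\gamma_j:\, 0\leq i<j\leq 3\}$ follows by comparing dimensions. (Strictly speaking, the diagonal entries $\gamma_i\gamma_i=-\eta_{ii}\,I_4$ are multiples of the identity and do not enlarge the tangent space at the identity; they are included in the span only trivially.)

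\emph{Step 1: dimension of the Lie algebra.} The map $\lambda:\mathrm{Spin}_{3,1}^+\to\mathrm{SO}_{3,1}^+$ is a smooth two-fold covering, hence a local diffeomorphism. It follows that
$$
\dim\mathfrak{spin}_{3,1}^+ \;=\; \dim\mathfrak{so}(3,1) \;=\; 6.
$$

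\emph{Step 2: each $\gamma_i\gamma_j$ with $i<j$ lies in $\mathfrak{spin}_{3,1}^+$.} I will exhibit, for every pair $0\leq i<j\leq 3$, a smooth curve $u:(-\varepsilon,\varepsilon)\to\mathrm{Spin}_{3,1}^+$ with $u(0)=I_4$ and $u'(0)$ a non-zero scalar multiple of $\gamma_i\gamma_j$. Two cases will be treated.
\emph{(a) Spatial-spatial pairs $(i,j)$, $1\leq i<j\leq 3$ (rotation generators):} the vectors $-e_i$ and $v(\theta):=\cos\theta\,e_i+\sin\theta\,e_j$ both belong to $\ourS_{3,1}^+$, so $u(\theta):=\gamma(-e_i)\,\gamma(v(\theta))$ is a product of two spacelike unit vectors and lies in $\mathrm{Spin}_{3,1}^+$. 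Using $\gamma_i^2=-I_4$ and the anticommutation relation, a direct computation gives $u(\theta)=\cos\theta\,I_4-\sin\theta\,\gamma_i\gamma_j$, so that $u(0)=I_4$ and $u'(0)=-\gamma_i\gamma_j$.
\emph{(b) Timelike-spatial pairs $(0,\hat k)$ (boost generators):} the vectors $e_0$ and $w(\theta):=\cosh\theta\,e_0+\sinh\theta\,e_{\hat k}$ both satisfy $\eta(\cdot,\cdot)=-1$, hence belong to $\ourS_{3,1}^-$; the product $u(\theta):=\gamma(e_0)\,\gamma(w(\theta))$ is then a product of two timelike unit vectors and lies in $\mathrm{Spin}_{3,1}^+$. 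Using $\gamma_0^2=I_4$ one finds $u(\theta)=\cosh\theta\,I_4+\sinh\theta\,\gamma_0\gamma_{\hat k}$, so that $u(0)=I_4$ and $u'(0)=\gamma_0\gamma_{\hat k}$.

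\emph{Step 3: linear independence and conclusion.} The collection $\{\gamma_i\gamma_j:0\leq i<j\leq 3\}$ consists of six elements, and is part of the standard basis of $\mathrm{Cl}(3,1)\cong\mathrm{\mathbf{span}}\{\gamma_{i_1}\cdots\gamma_{i_p}:0\leq i_1<\cdots<i_p\leq 3\}$ (of total dimension $2^4=16$); in particular these six elements are linearly independent. By Step~2 they lie in $\mathfrak{spin}_{3,1}^+$; by Step~1 this subspace has dimension exactly $6$; hence they form a basis of $\mathfrak{spin}_{3,1}^+$, which yields the claim.

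\emph{Main obstacle.} The heart of the argument is the explicit construction in Step~2, and the potential pitfall is sign bookkeeping. One must verify simultaneously that (i) the deformations chosen remain in $\ourS_{3,1}^{\pm}$ for all $\theta$ (so that the curve stays in $\mathrm{Spin}_{3,1}^+$ in the sense of the definition of the group as products of even numbers of unit spacelike and unit timelike vectors), and (ii) the derivative at $\theta=0$ is indeed a non-zero multiple of $\gamma_i\gamma_j$ and not of the identity, which requires a careful use of the anticommutation relations \eqref{equa-29D} together with $\gamma_0^2=I_4$ and $\gamma_{\hat k}^2=-I_4$.
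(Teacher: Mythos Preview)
The paper does not supply its own proof of this claim; it merely cites Hamilton \cite[Proposition~6.5.22]{Hamilton-2017}. Your argument is correct and self-contained: the dimension count from the two-fold covering $\lambda$, the explicit one-parameter curves through $I_4$ (products of two spacelike unit vectors for spatial pairs, two timelike unit vectors for time-space pairs), and the linear independence of the bivectors in the standard basis of $\mathrm{Cl}(3,1)$ combine to give the identification. The computations in Step~2 are accurate under the convention $\{\gamma_i,\gamma_j\}=-2\eta_{ij}I_4$ adopted in the paper.

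One minor point: your parenthetical that the diagonal terms $\gamma_i\gamma_i=\pm I_4$ are ``included in the span only trivially'' is not quite right. The element $I_4$ is linearly independent from the six off-diagonal products $\gamma_i\gamma_j$ ($i<j$) in $\mathrm{Cl}(3,1)$, and it does \emph{not} lie in $\mathfrak{spin}_{3,1}^+$ (which is exactly $6$-dimensional by your Step~1). Including the diagonal terms would genuinely enlarge the span to dimension~$7$. The paper's use of $i\leq j$ rather than $i<j$ is thus a slight imprecision in the stated claim; your proof in fact establishes the correct version with strict inequality, which is also what appears in Hamilton's reference.
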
 

%-----------------------------------------------------------------------

\paragraph{The Lie group homomorphism $\lambda$.}

The relation between $\mathrm{Spin}_{3,1}^+$ and $\mathrm{SO}_{3,1}^+$ can be described by the following action:
\begin{equation} \label{eq2-03-dec-2024}
R: \mathrm{Spin}_{3,1}^+\times \RR^{3,1} \to \RR^{3,1}, \qquad (u,x)\mapsto u\cdot x\cdot u^{-1}.
\end{equation}
Here the multiplication is understood as the multiplication in $\mathrm{Cl}(3,1)$. However, we need to show that $u\cdot x\cdot u^{-1}$ is still in $\RR^{3,1}$. This is checked by induction, as follows. For any vector $v\in \RR^{3,1}$ satisfying $\eta(v,v) = \pm 1$, we can write 
\be
- v\cdot x\cdot v^{-1} = (2\eta(v,x) + x\cdot v)\cdot v^{-1} =  -2\eta(v,x)\frac{v}{\eta(v,v)} + x\in\RR^{3,1}, 
\ee
which represents the reflection of $x$ with respect to the hyperplane with normal vector $v$. 
Define the map 
\be
\lambda: \mathrm{Spin}_{3,1}^+ \to \mathrm{SO}_{3,1}^+
\ee
by requiring that for the canonical basis \((e_0,e_1,e_2,e_3)\) of \(\RR^{3,1}\)
\begin{equation} \label{eq2-18-dec-2024-0}
\lambda(u)(e_i) = u\cdot e_i\cdot u^{-1}, \quad i=0,1,2,3
\end{equation} 
or, in matrix notation, 
\begin{equation} \label{eq2-18-dec-2024}
(e_0,e_1,e_2,e_3)\lambda(u) = u\cdot(e_0,e_1,e_2,e_3)\cdot u^{-1}
\end{equation}
We easily check that \(\lambda\) is a Lie group homomorphism. We emphasize that $\mathrm{Spin}_{3,1}^+$ is the universal double cover of $\mathrm{SO}_{3,1}^+$ with kernel $\{\pm I_4\}$, ensuring that $\lambda$ is surjective onto the proper orthochronous Lorentz group, as now stated.  
Indeed we can check the following property~\cite[Theorem~6.5.13]{Hamilton-2017}.

\begin{claim}
\label{thm1-31-dec-2024}
The map $\lambda$ is a surjective Lie group homomorphism whose kernel is $\{\pm I_4\}$. Furthermore, it is a 2-covering and the Lie algebra  of the Lie group  $\mathrm{Spin}_{3,1}^+$ and $\mathrm{SO}_{3,1}^+$, and satisfies 
\be
\lambda_*: \mathfrak{spin}_{3,1}^+\to \mathfrak{so}_{3,1}^+, \qquad z\mapsto \{\lambda_*(z):x\mapsto [z,x]\}. 
\ee 
\end{claim}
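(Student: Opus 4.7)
\medskip

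\textbf{Proof proposal.} The plan is to verify the four claims (homomorphism, kernel, surjectivity, and derivative) in the natural order, then deduce the covering property from them. First I would check that $\lambda$ is well-defined, i.e., that $\lambda(u)\in\mathrm{SO}_{3,1}^+$ for every $u\in\mathrm{Spin}_{3,1}^+$. Since any $u\in\mathrm{Spin}_{3,1}^+$ is by definition a product of an \emph{even} number of Clifford products of unit vectors, it suffices to analyze one factor $v\cdot w$ with $v,w$ of unit norm. Using the anti-commutation identity $v\cdot x + x\cdot v = -2\eta(v,x)\,\mathrm{Id}$ established in \eqref{equa-29D}, a short calculation shows that $v\cdot x\cdot v^{-1}$ is exactly the reflection of $x$ across the hyperplane normal to $v$; hence $v\cdot w\cdot(v\cdot w)^{-1}$ is a composition of two reflections and lies in $\mathrm{SO}_{3,1}^+$ (it preserves the metric, the time-orientation, and the space-orientation because it is a product of an even number of reflections in timelike/spacelike hyperplanes). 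The homomorphism property then follows formally:
\[
\lambda(uv)(x)=(uv)\cdot x\cdot(uv)^{-1}=u\cdot\bigl(v\cdot x\cdot v^{-1}\bigr)\cdot u^{-1}=\lambda(u)\circ\lambda(v)(x).
\]
Smoothness is automatic since both sides of \eqref{eq2-18-dec-2024-0} depend polynomially on $u$ and $u^{-1}$ in the ambient algebra $\mathrm{Cl}(3,1)$.

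Next I would compute the kernel. If $\lambda(u)=\mathrm{Id}$, then $u\cdot e_i=e_i\cdot u$ for each $i=0,1,2,3$. Writing $u$ in the canonical basis of $\mathrm{Cl}(3,1)$ (products $e_{i_1}\cdots e_{i_k}$ with $i_1<\cdots<i_k$) and imposing these four commutation relations, one checks by a direct bookkeeping argument that only the scalar component and the top-degree component $e_0e_1e_2e_3$ can survive. Intersecting this centralizer with $\mathrm{Spin}_{3,1}^+$ (which lies in the even sub-algebra and satisfies the Clifford-norm condition $u\cdot u^{-1}=\mathrm{Id}$), one finally isolates the two elements $\pm I_4$. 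Surjectivity of $\lambda$ then follows from the Lorentzian Cartan–Dieudonné theorem, which states that every element of $\mathrm{SO}_{3,1}^+$ can be written as a product of an even number of reflections through spacelike or timelike hyperplanes; reassembling these reflections into Clifford products of corresponding unit vectors produces a pre-image in $\mathrm{Spin}_{3,1}^+$.

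The derivative formula is the easiest step. Pick $z\in\mathfrak{spin}_{3,1}^+$ and consider the one-parameter subgroup $t\mapsto\exp(tz)\in\mathrm{Spin}_{3,1}^+$. By definition,
\[
\lambda_\ast(z)(x)=\frac{d}{dt}\Big|_{t=0}\bigl(\exp(tz)\cdot x\cdot\exp(-tz)\bigr)=z\cdot x-x\cdot z=[z,x],
\]
for every $x\in\RR^{3,1}\hookrightarrow\mathrm{Cl}(3,1)$; a short computation with $z=\tfrac12\gamma_i\gamma_j$ confirms that $[z,x]$ belongs to the range of the embedding $\RR^{3,1}\hookrightarrow\mathrm{Cl}(3,1)$, so $\lambda_\ast(z)\in\mathfrak{so}_{3,1}^+$. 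A dimension count ($\dim\mathfrak{spin}_{3,1}^+=\dim\mathfrak{so}_{3,1}^+=6$) combined with the injectivity of $\lambda_\ast$ (a consequence of the kernel calculation via the exponential map near the identity) shows that $\lambda_\ast$ is a linear isomorphism. Consequently $\lambda$ is a local diffeomorphism, and together with surjectivity and $|\ker\lambda|=2$ this yields the claim that $\lambda$ is a smooth $2$-fold covering.

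I expect the main obstacle to be the explicit computation of $\ker\lambda$: it requires careful bookkeeping in the graded basis of $\mathrm{Cl}(3,1)$ and a clean use of the constraint that $\mathrm{Spin}_{3,1}^+$ sits inside the even sub-algebra with the correct orientation conditions. Surjectivity, by contrast, is essentially a citation of the Lorentzian Cartan–Dieudonné theorem, and the derivative computation is a one-line consequence of bilinearity of Clifford multiplication.
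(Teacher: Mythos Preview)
The paper does not prove this claim at all: it is stated as a \texttt{Claim} with the preamble ``Indeed we can check the following property~\cite[Theorem~6.5.13]{Hamilton-2017}'' and no argument is given. Your proof sketch is the standard textbook route (and presumably close to what Hamilton does), so there is nothing to compare against in the paper itself.

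One small correction to your kernel step: in dimension~$4$ the volume element $e_0e_1e_2e_3$ \emph{anticommutes} with each $e_i$ (moving it past a single $e_i$ requires three sign flips), so it does not lie in the centralizer of $\RR^{3,1}$ inside $\mathrm{Cl}(3,1)$. Hence the centralizer already reduces to the scalars $\RR\cdot 1$, and intersecting with $\mathrm{Spin}_{3,1}^+$ (which forces the Clifford norm to be $1$) gives $\{\pm I_4\}$ directly, without needing to rule out the top-degree component separately. Also note that the reason $\lambda(u)$ lands in $\mathrm{SO}_{3,1}^+$ (rather than just $\mathrm{O}_{3,1}$) uses the precise form of the definition in the paper: $\mathrm{Spin}_{3,1}^+$ consists of products of an even number $2p$ of spacelike unit vectors \emph{and} an even number $2q$ of timelike unit vectors, which is exactly what is needed to preserve both orientations.
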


Observe in passing that the relation \eqref{eq2-18-dec-2024} 
can equivalently be written, in matrix notation, as 
\begin{equation} \label{eq3-18-dec-2024}
(\gamma_0,\gamma_1,\gamma_2,\gamma_3)\lambda(u) = u\cdot (\gamma_0,\gamma_1,\gamma_2,\gamma_3)\cdot u^{-1}.
\end{equation}
 
%-----------------------------------------------------------------------

\paragraph{General formalism.}

For a general Lorentz space \((V,\eta)\) endowed with a quadratic form $\eta$, we define
\[
\ourS(V)^+ := \{ v \in V \mid \eta(v,v) = 1 \}, \qquad \ourS(V)^- := \{ w \in V \mid \eta(w,w) = -1 \}.
\]
The Clifford algebra on \(V\) is defined intrinsically as
\[
\mathrm{Cl}(V) := T(V) \Big/ \langle\, v\otimes v - \eta(v,v)\cdot 1 \,\rangle,
\]
where \(T(V)\) denotes the tensor algebra of \(V\) and $\la v\otimes v - \eta(v,v)\cdot 1\ra$ denotes the ideal generated by the elements $ v\otimes v - \eta(v,v)\cdot 1$. We then have $v\cdot w := [v\otimes w]$,  
and we emphasize that this algebra does not depend on the choice of base.

For a fixed orthonormal basis \(e = \{e_i\}\) of \(V\), the coordinate map
\be
\gamma_e: V \to \mathrm{Cl}(3,1), \quad v = v^i e_i \mapsto v^i\gamma_i,
\ee
induces an isomorphism of real algebras \(\mathrm{Cl}(V) \cong \mathrm{Cl}_{3,1}\). Then, we define
\be
\mathrm{Spin}(V)^+ := \Bigl\{ v_1\cdot v_2 \cdots v_{2p} \cdot w_1\cdot w_2 \cdots w_{2q} \,\Big|\, v_m\in \ourS(V)^+,\, w_n\in \ourS(V)^- \Bigr\}.
\ee
Given an orthonormal basis \((e_0,e_1,e_2,e_3)\) of \(V\), the coordinate map \(\gamma_e\) induces a Lie group isomorphism
\be
\gamma_e: \mathrm{Spin}(V)^+ \to \mathrm{Spin}_{3,1}^+.
\ee
Finally, if \(\mathrm{SO}(V)^+\) denotes the subgroup of \(\mathrm{GL}(V)\) preserving the metric, time-orientation, and spatial orientation, then the covering map is given by
\begin{equation} \label{eq2-15-dec-2024}
\lambda_V: \mathrm{Spin}(V)^+ \to \mathrm{SO}(V)^+, \quad \lambda_V(u): v \mapsto u\cdot v\cdot u^{-1}.
\end{equation}

%-----------------------------------------------------------------------

\paragraph{Dirac sesquilinear forms.}
We next consider complex-valued sesquilinear forms $(\cdot,\cdot)$ defined on $\mathbb{C}^4$, which are assumed to be invariant under the action of $\mathrm{Spin}_{3,1}^+$, namely complex sesquilinear ($\RR$-conjugate-linear in the first, $\RR$-linear in the second slot) forms satisfying, by definition, 
\be
\aligned
&(u\psi,u\phi) = (\psi,\phi) \quad && \text {for all }  \phi,\psi\in \mathbb{C}^4 \text { and }  u\in\mathrm{Spin}_{3,1}^+\subset \text{End}(\mathbb{C}^4),
\\
&(\psi,c\phi) = c(\psi,\phi) = (c^*\psi,\phi)\quad && \text { for all }  c\in \mathbb{C} \text{ and }  \phi,\psi\in\mathbb{C}^4, 
\endaligned
\ee
where $c^*$ denotes the complex conjugate to $c$. An important class of examples is given as follows. 

\begin{claim}\label{lem1-15-dec-2024}
Let $\la \cdot,\cdot\ra$ be the standard Hermitian product on $\mathbb{C}^4$, i.e.
$\la\psi,\phi\ra := \psi^{\dag} \phi$ (the dag denoting the Hermitian transpose). 
Then the map 
\be
(\psi,\phi)\mapsto \la \gamma_0\cdot\psi,\phi\ra
\ee
is an invariant sesquilinear form. 
\end{claim}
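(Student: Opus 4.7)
The plan is to reduce invariance to the single algebraic identity $u^{\dagger}\gamma_0 u = \gamma_0$ for every $u\in\mathrm{Spin}_{3,1}^+$. Indeed, using $\gamma_0^{\dagger}=\gamma_0$ from \eqref{equa-29D} and the definition of the standard Hermitian product, one has $(\psi,\phi) = \la\gamma_0\psi,\phi\ra = \psi^{\dagger}\gamma_0\phi$, which is manifestly sesquilinear (conjugate-linear in the first entry, linear in the second). Then $(u\psi,u\phi)=\psi^{\dagger}(u^{\dagger}\gamma_0 u)\phi$, and the claim follows the moment we verify the operator identity $u^{\dagger}\gamma_0 u=\gamma_0$ on the group $\mathrm{Spin}_{3,1}^+$.

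To establish that identity, the first step is the elementary computation that, thanks to \eqref{equa-29D},
\[
\gamma_0\,\gamma_i\,\gamma_0 \;=\; \gamma_i^{\dagger} \qquad (i=0,1,2,3),
\]
a case check: for $i=0$ use $\gamma_0^{2}=I_4$ (which follows from $\{\gamma_0,\gamma_0\}=-2\eta_{00}I_4=2I_4$) and $\gamma_0^{\dagger}=\gamma_0$; for $i=\ih\in\{1,2,3\}$ use the anticommutation $\gamma_0\gamma_{\ih}=-\gamma_{\ih}\gamma_0$ together with $\gamma_0^{2}=I_4$ and $\gamma_{\ih}^{\dagger}=-\gamma_{\ih}$. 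Extending by real linearity via the Gamma map \eqref{equa-map}, this gives for every $v\in\RR^{3,1}$ the key relation
\[
\gamma_0\,\gamma(v)\,\gamma_0 \;=\; \gamma(v)^{\dagger},
\qquad\text{equivalently}\qquad
\gamma(v)^{\dagger}\gamma_0 \;=\; \gamma_0\,\gamma(v).
\]

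Next, take an arbitrary $u\in\mathrm{Spin}_{3,1}^+$ in the factorized form
\[
u \;=\; \gamma(v_1)\cdots\gamma(v_{2p})\,\gamma(w_1)\cdots\gamma(w_{2q}),
\qquad v_m\in\ourS_{3,1}^+,\quad w_n\in\ourS_{3,1}^-.
\]
Taking Hermitian adjoints reverses the order of the factors, and then iteratively commuting $\gamma_0$ through each $\gamma(\cdot)^{\dagger}$ by means of the key relation above produces
\[
u^{\dagger}\gamma_0 \;=\; \gamma_0\,\gamma(w_{2q})\cdots\gamma(w_1)\,\gamma(v_{2p})\cdots\gamma(v_1).
\]
On the other hand, the Clifford identities $v_m\cdot v_m=-\eta(v_m,v_m)I_4=-I_4$ and $w_n\cdot w_n=-\eta(w_n,w_n)I_4=I_4$ give $\gamma(v_m)^{-1}=-\gamma(v_m)$ and $\gamma(w_n)^{-1}=\gamma(w_n)$, so that
\[
u^{-1} \;=\; (-1)^{2p}\,\gamma(w_{2q})\cdots\gamma(w_1)\,\gamma(v_{2p})\cdots\gamma(v_1)
\;=\; \gamma(w_{2q})\cdots\gamma(w_1)\,\gamma(v_{2p})\cdots\gamma(v_1),
\]
where the parity cancellation $(-1)^{2p}=1$ is precisely where the defining evenness of the exponents in $\mathrm{Spin}_{3,1}^+$ enters. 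Comparing the two displays yields $u^{\dagger}\gamma_0 = \gamma_0\,u^{-1}$, i.e.\ $u^{\dagger}\gamma_0 u=\gamma_0$, completing the proof.

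The only real obstacle is bookkeeping: one must track that the sign $(-1)^{2p}$ coming from the spacelike inversions (and the trivial sign from the timelike ones) is exactly compensated because the definition of $\mathrm{Spin}_{3,1}^+$ uses an \emph{even} number of each type of generator; were either exponent odd, only $u^{\dagger}\gamma_0 u=\pm\gamma_0$ would survive, and the sesquilinear form would fail to be invariant on the whole group.
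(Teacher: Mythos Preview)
Your proof is correct and takes essentially the same approach as the paper: both arguments rest on the identity $\gamma(v)^{\dagger}\gamma_0=\gamma_0\gamma(v)$ (equivalently $\gamma_0\gamma_i\gamma_0=\gamma_i^{\dagger}$) derived from \eqref{equa-29D}, and then iterate over the $2p+2q$ factors defining $u$, using the evenness $(-1)^{2p}=1$ to cancel the signs from the spacelike generators. The only cosmetic difference is that the paper works at the level of the sesquilinear form, showing $\la\gamma_0\,v\!\cdot\!\psi,\,v\!\cdot\!\phi\ra=-\eta(v,v)\la\gamma_0\psi,\phi\ra$ and iterating, whereas you phrase it as the operator identity $u^{\dagger}\gamma_0 u=\gamma_0$; these are the same computation.
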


The proof of this result is provided in Section~\ref{section=N19}, below.
 Furthermore, an invariant sesquilinear form $( \cdot,\cdot)$ is called a \emph{Dirac sesquilinear form} if, in addition, 
\begin{equation}
( \phi,X\cdot \psi) = (X\cdot \phi,\psi)
\qquad \text { for all }  X\in \RR^{3,1} \text{ and }  \phi,\psi\in\mathbb{C}^4
\end{equation}
(without complex conjugation). This condition guarantees that the form ``commutes'' with Clifford multiplication. In other words, if one multiplies a spinor by a vector  and then takes the Dirac form, the result is the same as if one had first taken the form and then inserted the vector on the other side. This compatibility is essential when defining the adjoint of the Dirac operator and when ensuring that conserved currents (derived from the Dirac equation) have the correct transformation properties. 

%------------------------------------------------- 

A typical example of a Dirac form (with a convenient sign convention) is
\begin{equation}\label{eq1-12-jan-2025}
(\phi,\psi) :=  \langle \gamma_0\phi,\psi\rangle = - \langle \gamma^0\phi,\psi\rangle,
\end{equation}
which we refer to as the \emph{standard Dirac form}. Here, \(\langle \cdot,\cdot \rangle\) is the standard Hermitian product on \(\mathbb{C}^4\), defined by \(\langle \psi,\phi\rangle = \psi^{\dagger}\phi\). In what follows, this form is sometimes denoted by
$$
\langle \phi,\psi \rangle_{\ourD} =  \langle \gamma_0\phi,\psi\rangle
\quad \text{for all } \phi,\psi\in \mathbb{C}^4.
$$
with the subscript \(\ourD\) indicating ``Dirac from''.
}

%-------------------------------------------------------------------------------------------------------------------------------------

\subsection{ Spin structure over a topologically trivial spacetime} 
\label{subsec1-08-jan-2025}
{
\paragraph{\(\mathrm{SO}^+_{3,1} \)-bundle.}

We use the language of principal bundles and refer the reader to Section~\ref{annex-principal-bundle} for basic terminology. We are interested in the description of the space of spinors in a curved spacetime, and we provide the geometric context in which the results of the present Monograph will be formulated in a gauge-invariant manner. 

We are now in a position to apply the previous formalism and, for simplicity, from now on we assume that the base spacetime \(\mathcal{M} \) is diffeomorphic to \(\mathbb{R}^4\). Hence there exists a global orthonormal frame \(\{e_0,e_1,e_2,e_3\} \) satisfying, by definition, $g(e_i,e_j) = \eta_{ij}$. Obviously, one procedure to construct such a frame is by starting with any globally defined frame and then applying the Schmidt orthogonalization procedure. On the other hand for spacetimes with non-trivial topology, such a global frame need not exist, which leads us to topological obstructions to the existence of spin structures. 

It is convenient to use the notation $e : x \,\mapsto\, (e_0,e_1,e_2,e_3)$ for the collection of \emph{frames} at each point. We consider the principal bundle
\[
\pi_{\mathrm{SO}} : \mathrm{SO}^+(\mathcal{M})  \to \mathcal{M}
\]
with structure group \(\mathrm{SO}^+_{3,1} \), consisting of all positively oriented and time-oriented orthonormal frames on the spacetime \(\mathcal{M} \). Since \(\mathcal{M} \cong \mathbb{R}^4\), by recalling the  \emph{given} frame there is an obvious global trivialization
\be
\aligned
\mathcal{T}_\mathrm{SO}^e:\quad 
& \mathrm{SO}^+(\mathcal{M}) \quad\;\;\;\longmapsto\quad \mathcal{M} \times \mathrm{SO}^+_{3,1},
\\
&(f_0,f_1,f_2,f_3)\;\longmapsto\; (x,A) \quad \text{such that} \quad 
(f_0,f_1,f_2,f_3) = (e_0,e_1,e_2,e_3)\,A.
\endaligned
\ee
It is clear that \(\mathrm{SO}^+_{3,1} \) acts on \(\mathrm{SO}^+(\mathcal{M})\) from the right, that is, 
\be
\aligned
\mathrm{SO}^+(\mathcal{M}) \times \mathrm{SO}^+_{3,1}
& \;\longrightarrow\; \mathrm{SO}^+(\mathcal{M}),
\qquad \quad 
\bigl((f_0,f_1,f_2,f_3), A\bigr)
& \;\longmapsto\; (f_0,f_1,f_2,f_3)\,A,
\endaligned
\ee
preserving each fiber and acting freely on it.  These properties indeed allow us to deal with \(\mathrm{SO}^+(\mathcal{M})\) as a principal bundle over \(\mathcal{M} \) with structure group \(\mathrm{SO}^+_{3,1} \). In short (with the orientation being tacitly assumed), we may refer to it as the \emph{principal bundle of orthonormal frames.} 

%-----------------------------------------------------------------------

\paragraph{\(\mathrm{Spin}^+_{3,1} \)-bundle.}

Next, let us consider 
\be
\mathrm{Spin}^+(\mathcal{M}) 
= \bigsqcup_{x\in \mathcal{M}} \mathrm{Spin}(T_x\mathcal{M})^+,
\quad
\pi_{\mathrm{Spin}}: \mathrm{Spin}^+(\mathcal{M})  \to \mathcal{M}.
\ee
Again, using the \emph{given} global orthonormal frame \(\{e_0,e_1,e_2,e_3\} \), we obtain a trivialization
\be
\mathcal{T}_\mathrm{Spin}^e:\ 
\mathrm{Spin}^+(\mathcal{M}) \;\longrightarrow\; \mathcal{M} \times \mathrm{Spin}^+_{3,1},
\quad 
p \;\longmapsto\; \bigl(x,\gamma_e(p)\bigr),
\ee
where, for all \(v\in T_x\mathcal{M} \),
\be
\gamma_e(v) = v^i\,\gamma_i,
\quad \text{with} \quad
v = v^i\,e_i.
\ee
This induces a Lie group isomorphism \(\gamma_e: \mathrm{Spin}(T_x\mathcal{M})^+  \to \mathrm{Spin}_{3,1}^+\).  The group \(\mathrm{Spin}_{3,1}^+\) acts from the right on \(\mathrm{Spin}^+(\mathcal{M})\) by
\begin{equation} \label{eq1-16-dec-2024}
\mathrm{Spin}^+(\mathcal{M})\times \mathrm{Spin}_{3,1}^+
\to \mathrm{Spin}^+(\mathcal{M}),
\quad \qquad 
(p,u) \mapsto p\star u = p\cdot \gamma_e^{-1}(u).
\end{equation}
Thus, \(\mathrm{Spin}^+(\mathcal{M})\) becomes a principal bundle over \(\mathcal{M} \) with structure group \(\mathrm{Spin}_{3,1}^+\).  This is called the  \emph{spin bundle} with respect to \(\{e_0,e_1,e_2,e_3\} \) and is denoted by \(\mathrm{Spin}_e^+(\mathcal{M})\) to indicate its dependence on the chosen frame.

We emphasize that the algebraic structure of $\mathrm{Spin}^+_e(\mathcal{M})$ does not depend on the collection of frames $e$, however, the right action of the structure group $\mathrm{Spin}_{3,1}^+$ is determined by the frames $e$ via the coordinate map $\gamma_e$. We now define the map
\be
\Lambda_e : \mathrm{Spin}_e^+(\mathcal{M})
\;\longrightarrow\; \mathrm{SO}^+(\mathcal{M}),
\quad 
p \;\longmapsto\; 
p\cdot (e_0,e_1,e_2,e_3)\cdot p^{-1},
\ee
where the dot denotes the Clifford multiplication.

\begin{lemma}
\label{lem1-16-dec-2024}
With the above notation, one has 
\[
\Lambda_e\bigl(p\star u\bigr) = \Lambda_e(p)\,\lambda(u),
\]
so that the following diagram commutes:
\be
\begin{tikzcd}
{\mathrm{Spin}_e^+(\mathcal{M})\times \mathrm{Spin}_{3,1}^+} 
\arrow[rr] 
\arrow[dd, "\Lambda_e\times \lambda"] 
&  & \mathrm{Spin}_e^+(\mathcal{M}) 
\arrow[dd, "\Lambda_e"] 
\arrow[rd, "\pi_{\mathrm{Spin}}"] 
&   \\
&  &  & \mathcal{M} \\
{\mathrm{SO}^+(\mathcal{M})\times \mathrm{SO}_{3,1}^+} 
\arrow[rr]                                    
&  & \mathrm{SO}^+(\mathcal{M}) 
\arrow[ru, "\pi_{\mathrm{SO}}"]                       
&  
\end{tikzcd}
\ee
\end{lemma}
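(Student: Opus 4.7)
The claim really consists of two pieces: the equivariance identity $\Lambda_e(p\star u)=\Lambda_e(p)\,\lambda(u)$, and the compatibility of $\Lambda_e$ with the two projections onto $\mathcal{M}$. The latter is immediate from the construction, since $\pi_{\mathrm{Spin}}(p)$ depends only on the base point of the fibre containing $p$, and $\Lambda_e(p)=p\cdot(e_0(x),\ldots,e_3(x))\cdot p^{-1}$ again lies over the same point $x\in\mathcal{M}$; thus I would dispose of the right-hand triangle in a single sentence. The whole content of the lemma is therefore the square, i.e.~the intertwining of the right-actions.

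For the equivariance, I would unwind the definitions and set $\tilde u:=\gamma_e^{-1}(u)\in\mathrm{Spin}(T_x\mathcal{M})^+$, so that $p\star u = p\cdot\tilde u$ by \eqref{eq1-16-dec-2024}. Substituting directly into the definition of $\Lambda_e$ gives
\begin{equation*}
\Lambda_e(p\star u)
=(p\cdot\tilde u)\cdot(e_0,\ldots,e_3)\cdot(p\cdot\tilde u)^{-1}
=p\cdot\bigl(\tilde u\cdot(e_0,\ldots,e_3)\cdot\tilde u^{-1}\bigr)\cdot p^{-1}.
\end{equation*}
Thus everything reduces to evaluating the inner conjugation $\tilde u\cdot(e_0,\ldots,e_3)\cdot\tilde u^{-1}$. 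This is precisely the content of the abstract covering map $\lambda_V$ introduced in \eqref{eq2-15-dec-2024}, applied to the Lorentzian vector space $V=T_x\mathcal{M}$: by definition $\lambda_{T_x\mathcal{M}}(\tilde u)$ is the unique element of $\mathrm{SO}(T_x\mathcal{M})^+$ whose matrix relative to the frame $(e_0(x),\ldots,e_3(x))$ is characterized by $\tilde u\cdot(e_0,\ldots,e_3)\cdot\tilde u^{-1}=(e_0,\ldots,e_3)\,\lambda_{T_x\mathcal{M}}(\tilde u)$.

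Next I will check that $\lambda_{T_x\mathcal{M}}(\tilde u)=\lambda(u)$ as matrices in $\mathrm{SO}_{3,1}^+$. This is a naturality statement: the coordinate isomorphism $\gamma_e$ is by construction compatible with Clifford multiplication and sends $e_i(x)\in T_x\mathcal{M}$ to the canonical basis vector $e_i\in\mathbb{R}^{3,1}$, so it intertwines the two conjugation actions and, passing to the induced quotient covers, gives $\lambda\circ\gamma_e=\gamma_e\circ\lambda_{T_x\mathcal{M}}$ on matrix coefficients (this is also the abstract Lie-group statement of Claim~\ref{thm1-31-dec-2024} transported by $\gamma_e$). Hence the identity above becomes $\tilde u\cdot(e_0,\ldots,e_3)\cdot\tilde u^{-1}=(e_0,\ldots,e_3)\,\lambda(u)$.

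Finally I will combine these two observations: by linearity of Clifford multiplication in each slot of the $4$-tuple,
\begin{equation*}
p\cdot\bigl((e_0,\ldots,e_3)\,\lambda(u)\bigr)\cdot p^{-1}
=\bigl(p\cdot(e_0,\ldots,e_3)\cdot p^{-1}\bigr)\,\lambda(u)
=\Lambda_e(p)\,\lambda(u),
\end{equation*}
which is the claimed equivariance. The only genuinely non-trivial step is the naturality identification $\lambda_{T_x\mathcal{M}}(\tilde u)=\lambda(u)$; everything else is bookkeeping. I expect this naturality point to be the main obstacle to write carefully, because one must be explicit that $\gamma_e$ is not only a linear isomorphism but an isomorphism of Clifford algebras sending the orthonormal basis to the canonical one, and that on this basis it must also send the chosen orientations and time-orientations consistently in order for it to restrict to an isomorphism of the $\mathrm{Spin}^+$ subgroups; once that is verified, the commutativity of the square and of the triangle follow as above.
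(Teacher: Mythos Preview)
Your proof is correct and follows essentially the same approach as the paper. The paper's proof likewise reduces to showing $\gamma_e^{-1}(u)\cdot(e_0,\ldots,e_3)\cdot(\gamma_e^{-1}(u))^{-1}=(e_0,\ldots,e_3)\lambda(u)$; where you invoke naturality of the abstract map $\lambda_V$ under $\gamma_e$, the paper simply applies $\gamma_e$ directly to both sides and recognizes the defining relation \eqref{eq3-18-dec-2024} for $\lambda$, which is precisely the unwinding of your naturality statement.
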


The proof is postponed to Section~\ref{section=N19}. In the trivializations \(\mathcal{T}_{\mathrm{Spin}}^e\) and \(\mathcal{T}_{\mathrm{SO}}^e\), the map \(\Lambda_e\) is exactly given by \(\lambda\), as now stated.

\begin{lemma}
With the notation above one has 
\[
\lambda\circ \mathcal{T}_{\mathrm{Spin}}^e 
 =  
\mathcal{T}_{\mathrm{SO}}^e \circ \Lambda_e.
\]
\end{lemma}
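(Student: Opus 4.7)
The identity $\lambda\circ \mathcal{T}_{\mathrm{Spin}}^e = \mathcal{T}_{\mathrm{SO}}^e\circ \Lambda_e$ is to be interpreted fiber-wise, with $\lambda$ acting only on the second factor in $\mathcal{M}\times \mathrm{Spin}^+_{3,1}$. My plan is to unravel both sides on an arbitrary $p\in \mathrm{Spin}^+(\mathcal{M})$ with $x = \pi_{\mathrm{Spin}}(p)$ and $u = \gamma_e(p)\in \mathrm{Spin}^+_{3,1}$, so that the left-hand side reads $(x,\lambda(u))$; the whole point is then to identify the transition matrix extracted from the frame $\Lambda_e(p)$ by $\mathcal{T}_{\mathrm{SO}}^e$ with $\lambda(u)$.

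The cleanest route proceeds via Lemma~\ref{lem1-16-dec-2024}. I would pick the canonical element $p_0\in \mathrm{Spin}^+(\mathcal{M})_x$ corresponding to $1\in \mathrm{Cl}(T_x\mathcal{M})$; by definition $\gamma_e(p_0) = I_4$, and $\Lambda_e(p_0) = 1\cdot(e_0,e_1,e_2,e_3)\cdot 1 = (e_0,e_1,e_2,e_3)$, which is sent by $\mathcal{T}_{\mathrm{SO}}^e$ to $(x,\mathrm{Id})$. Since $\mathrm{Spin}_{3,1}^+$ acts freely and transitively on the fiber, any $p$ can be written as $p = p_0\star u$ with $u = \gamma_e(p)$, as one checks from~\eqref{eq1-16-dec-2024}. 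Lemma~\ref{lem1-16-dec-2024} then gives
\[
\Lambda_e(p) \,=\, \Lambda_e(p_0\star u) \,=\, \Lambda_e(p_0)\,\lambda(u) \,=\, (e_0,e_1,e_2,e_3)\,\lambda(u),
\]
and applying $\mathcal{T}_{\mathrm{SO}}^e$ yields exactly $(x,\lambda(u)) = \lambda\circ \mathcal{T}_{\mathrm{Spin}}^e(p)$, as required.

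As an alternative (or as a sanity check), I would give the direct computation using that $\gamma_e:\mathrm{Cl}(T_x\mathcal{M})\to \mathrm{Cl}(3,1)$ is an algebra isomorphism. Writing $\Lambda_e(p) = (f_0,f_1,f_2,f_3)$ with $f_i = p\cdot e_i\cdot p^{-1}\in T_x\mathcal{M}$ (the fact that the conjugation preserves $T_x\mathcal{M}$ follows from the anti-commutation identity used to justify~\eqref{eq2-03-dec-2024}), and expanding $f_i = A^j{}_i\,e_j$, the algebra morphism property gives
\[
\gamma_e(f_i) \,=\, \gamma_e(p)\cdot \gamma_e(e_i)\cdot \gamma_e(p)^{-1} \,=\, u\cdot \gamma_i\cdot u^{-1},
\]
while simultaneously $\gamma_e(f_i) = A^j{}_i\,\gamma_j$. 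Comparing this with the defining identity~\eqref{eq3-18-dec-2024} for $\lambda$, i.e.\ $u\cdot \gamma_i\cdot u^{-1} = \lambda(u)^j{}_i\,\gamma_j$, forces $A = \lambda(u)$, and $\mathcal{T}_{\mathrm{SO}}^e(\Lambda_e(p)) = (x,A) = (x,\lambda(\gamma_e(p)))$.

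The proof is essentially a bookkeeping exercise with no real obstacle; the only point that deserves some care is the notational overload of $e$, which simultaneously denotes the canonical basis of $\mathbb{R}^{3,1}$ and the chosen global frame on $\mathcal{M}$, and keeping track of where Clifford multiplication takes place (in $\mathrm{Cl}(T_x\mathcal{M})$ or in $\mathrm{Cl}(3,1)$). I would state explicitly at the outset that $\gamma_e$ extends to an isomorphism of Clifford algebras intertwining the two groups $\mathrm{Spin}(T_x\mathcal{M})^+$ and $\mathrm{Spin}_{3,1}^+$, so that the conjugation actions defining $\Lambda_e$ and $\lambda$ are genuinely conjugate under $\gamma_e$, which is precisely the content of the desired commutativity.
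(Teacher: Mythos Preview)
Your proposal is correct, and your second approach (the ``sanity check'') is exactly the paper's proof: apply $\gamma_e$ to the frame identity $(e_0,\dots,e_3)\,\mathcal{T}_{\mathrm{SO}}^e\circ\Lambda_e(p) = p\cdot(e_0,\dots,e_3)\cdot p^{-1}$ and recognize the defining relation~\eqref{eq3-18-dec-2024} for $\lambda$. Your first approach via Lemma~\ref{lem1-16-dec-2024} and the distinguished element $p_0$ is also fine, but it is just a repackaging of the same computation, since the proof of that equivariance lemma already encodes the identity $(\gamma_0,\dots,\gamma_3)\lambda(u) = u\cdot(\gamma_0,\dots,\gamma_3)\cdot u^{-1}$.
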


\begin{proof}
Let \(p\in\mathrm{Spin}^+(\mathcal{M})\) and observe that
\[
(e_0,e_1,e_2,e_3)\,\mathcal{T}_{\mathrm{SO}}^e\circ \Lambda_e(p) 
 =  
p\cdot (e_0,e_1,e_2,e_3)\cdot p^{-1}.
\]
Applying \(\gamma_e\) to both sides, we get
\[
(\gamma_0,\gamma_1,\gamma_2,\gamma_3)\,\mathcal{T}_{\mathrm{SO}}^e\circ \Lambda_e(p)
 = 
\gamma_e(p)\,\bigl(\gamma_0,\gamma_1,\gamma_2,\gamma_3\bigr)\,\gamma_e(p)^{-1}
 = 
(\gamma_0,\gamma_1,\gamma_2,\gamma_3)\,\lambda\circ\gamma_e(p),
\]
which implies $
\mathcal{T}_{\mathrm{SO}}^e\circ \Lambda_e(p) = \lambda\circ \mathcal{T}_{\mathrm{Spin}}^e(p)$.
\end{proof}

Based on the above result and the property of the map $\lambda$ in Claim~\ref{thm1-31-dec-2024}, 
we have also the following result.

\begin{proposition}
The map $\Lambda_e: \mathrm{Spin}_e^+(\mathcal{M})\mapsto \mathrm{SO}^+(\mathcal{M})$ is a 2-covering and especially, is a local diffeomorphism. 
\end{proposition}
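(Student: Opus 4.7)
The plan is to exploit the commutative diagram of the previous lemma to reduce the statement about $\Lambda_e$ to the corresponding (already established) statement about $\lambda$ from Claim~\ref{thm1-31-dec-2024}. Extending $\lambda$ to a map $\widetilde{\lambda}:\mathcal{M}\times\mathrm{Spin}_{3,1}^+\to \mathcal{M}\times\mathrm{SO}_{3,1}^+$ by $\widetilde{\lambda}(x,u)=(x,\lambda(u))$, the identity $\mathcal{T}_{\mathrm{SO}}^e\circ\Lambda_e=\widetilde{\lambda}\circ\mathcal{T}_{\mathrm{Spin}}^e$ (which is precisely the content of the previous lemma, once read in trivializations) together with the fact that both $\mathcal{T}_{\mathrm{Spin}}^e$ and $\mathcal{T}_{\mathrm{SO}}^e$ are global diffeomorphisms of the respective total spaces onto products, yields
\begin{equation*}
\Lambda_e=(\mathcal{T}_{\mathrm{SO}}^e)^{-1}\circ\widetilde{\lambda}\circ\mathcal{T}_{\mathrm{Spin}}^e.
\end{equation*}
Hence $\Lambda_e$ is smoothly conjugate to the product map $\widetilde{\lambda}$, and it suffices to transfer the relevant properties of $\lambda$ through this conjugation.

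Next, I would invoke Claim~\ref{thm1-31-dec-2024}: the map $\lambda:\mathrm{Spin}_{3,1}^+\to\mathrm{SO}_{3,1}^+$ is a surjective Lie group homomorphism which is a $2$-covering with kernel $\{\pm I_4\}$. The product $\widetilde{\lambda}$ is then immediately a smooth $2$-covering of $\mathcal{M}\times\mathrm{SO}_{3,1}^+$: given any $(x,A)\in\mathcal{M}\times\mathrm{SO}_{3,1}^+$, a local section $s$ of $\lambda$ around $A$ defined on an evenly covered neighbourhood $\mathcal{V}\subset\mathrm{SO}_{3,1}^+$ yields the evenly covered neighbourhood $\mathcal{U}\times\mathcal{V}$ of $(x,A)$ in the trivial product bundle, and the two sheets over $(x,A)$ are given by $\{(x,\pm s(A))\}$. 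Since diffeomorphisms preserve the property of being a covering map of prescribed degree, conjugation by the global trivializations transfers this to $\Lambda_e$, so that $\Lambda_e$ is a smooth $2$-covering of $\mathrm{SO}^+(\mathcal{M})$.

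Finally, any $n$-covering of smooth manifolds is in particular a local diffeomorphism, since near each point of the base one can invert a chosen sheet to obtain a smooth local inverse; this yields the last assertion. For completeness, it is worth recording that the non-trivial deck transformation of $\Lambda_e$ corresponds, through $\mathcal{T}_{\mathrm{Spin}}^e$ and the action \eqref{eq1-16-dec-2024}, to the right-action $p\mapsto p\star(-I_4)=-p$ on $\mathrm{Spin}_e^+(\mathcal{M})$, which is a direct consequence of the isomorphism property of $\gamma_e$ and the identity $\ker\lambda=\{\pm I_4\}$. The argument is essentially a transport-of-structure through the explicit trivializations already constructed, and I do not foresee any substantive obstacle; the only point to verify with care is that $\widetilde{\lambda}$ is indeed globally a $2$-covering, which reduces, by the product structure, to the corresponding property of $\lambda$.
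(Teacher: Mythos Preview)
Your proposal is correct and takes essentially the same approach as the paper: the paper merely states that the result follows ``based on the above result and the property of the map $\lambda$ in Claim~\ref{thm1-31-dec-2024}'', and your argument via the conjugation $\Lambda_e=(\mathcal{T}_{\mathrm{SO}}^e)^{-1}\circ\widetilde{\lambda}\circ\mathcal{T}_{\mathrm{Spin}}^e$ is precisely the intended transport of structure, spelled out in more detail than the paper itself provides.
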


%-----------------------------------------------------------------------

\paragraph{The spin structure.}

\begin{definition}[cf.\ {Hamilton~\cite[Definition 5.3.1]{Hamilton-2017}}]
\label{def1-17-dec-2024}
Let \(\pi: \mathcal{P} \to \mathcal{M} \) and \(\pi': \mathcal{P}'\to \mathcal{M} \) be two principal bundles over a smooth manifold \(\mathcal{M} \) with the same structure group \(G\).  A diffeomorphism \(F: \mathcal{P} \to \mathcal{P}'\) is called a  \emph{bundle isomorphism} if it is fiber-preserving and \(G\)-equivariant, i.e.,
\be
\aligned
\pi\bigl(F(p)\bigr) & = \pi(p)\quad && \text{ for all } \,p\in \mathcal{P},
\\
F\bigl(p\star g\bigr) & = F(p)\,*\,g
\quad && \text { for all } \,p\in \mathcal{P}, \,  g\in G,
\endaligned
\ee
where \(\star\) and \(*\) are the right-actions of \(G\) on \(\mathcal{P} \) and \(\mathcal{P}'\), respectively.
\end{definition}

Returning to spin bundles, suppose
\((f_0,f_1,f_2,f_3)\) is another positively oriented and time-oriented orthonormal frame on \(\mathcal{M} \). Our goal is to show that \(\mathrm{Spin}_f^+(\mathcal{M})\) and \(\mathrm{Spin}_e^+(\mathcal{M})\) are isomorphic. Let
\be
T_{e \to f} : \mathcal{M}  \to \mathrm{SO}^+_{3,1}
\quad\text{such that} \quad
(f_0,f_1,f_2,f_3) = (e_0,e_1,e_2,e_3)\,T_{e \to f}.
\ee
Denote by \(\Lambda_e\) and \(\Lambda_f\) the respective maps generated by these two frames.  Then, observe that the action of $p$ is real-linear: 
\be
\Lambda_f(p) 
 =  
p\cdot(f_0,f_1,f_2,f_3)\cdot p^{-1}
 = 
p\cdot (e_0,e_1,e_2,e_3)\cdot p^{-1} \,T_{e \to f}
 = 
\Lambda_e(p)\,T_{e \to f}.
\ee
Recall the universal covering map \(\lambda: \mathrm{Spin}_{3,1}^+ \to \mathrm{SO}^+_{3,1} \).  We choose a lift
$
\tau_{e \to f}: \mathcal{M}  \to \mathrm{Spin}_{3,1}^+
$
defined by the key identity 
\begin{equation} \label{eq1-18-dec-2024}
\tau_{e \to f} \,\cdot(\gamma_0,\gamma_1,\gamma_2,\gamma_3)\,\cdot\tau_{e \to f}^{-1}
 =  (\gamma_0,\gamma_1,\gamma_2,\gamma_3)\,T_{e \to f}.
\end{equation}
Then, by Lemma~\ref{lem1-16-dec-2024} we have 
\be
\Lambda_f(p) 
 =  \Lambda_e(p)\,\lambda(\tau_{e \to f}) 
 =  
\Lambda_e\!\bigl(p \star \tau_{e \to f} \bigr)
 = 
\Lambda_e\!\bigl(p\cdot \gamma_e^{-1}(\tau_{e \to f})\bigr)
\ee
and we introduce the mapping 
\begin{equation} \label{eq1-17-dec-2024}
F_{f \to e}: \mathrm{Spin}_f^+(\mathcal{M}) \to \mathrm{Spin}_e^+(\mathcal{M}),
\quad \qquad 
p \mapsto p \star \tau_{e \to f} = p \,\cdot\, \gamma_e^{-1} \!\bigl(\tau_{e \to f} \bigr).
\end{equation}
We arrive at the following important observation.

\begin{proposition}
\label{prop1-19-dec-2024}
Let \((e_0,e_1,e_2,e_3)\) and \((f_0,f_1,f_2,f_3)\) be two orthonormal frames on \(\mathcal{M} \) with positive orientation and time-orientation.  Define \(F_{f \to e} \) by \eqref{eq1-17-dec-2024}.  Then \(F_{f \to e} \) is a bundle isomorphism and, moreover,
\[
\Lambda_f = \Lambda_e \circ F_{f \to e}.
\]
In other words, the following diagram commutes:
\[
\begin{tikzcd}
\mathrm{Spin}_f^+(\mathcal{M}) 
\arrow[rd, "\Lambda_f"] 
\arrow[rr, "F_{f \to e}"]
&  & \mathrm{Spin}_e^+(\mathcal{M}) 
\arrow[ld, "\Lambda_e"'] 
\\
& \mathrm{SO}^+(\mathcal{M}) &                                                    
\end{tikzcd}
\]
\end{proposition}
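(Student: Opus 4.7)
The strategy is to verify the three defining properties of a bundle isomorphism (fiber-preservation, smoothness with smooth inverse, and $\mathrm{Spin}_{3,1}^+$-equivariance) in that order, and then recover the compatibility $\Lambda_f=\Lambda_e\circ F_{f\to e}$ by reading off the computation that was already initiated above \eqref{eq1-17-dec-2024}. The main technical ingredient, on which both equivariance and the commutativity of the diagram rest, will be an \emph{intertwining identity} between the two coordinate maps $\gamma_e$ and $\gamma_f$ at a common fiber. Namely, from the definition of $T_{e\to f}$ and from the defining relation \eqref{eq1-18-dec-2024} of $\tau_{e\to f}$, I will show
\begin{equation*}
\gamma_e(p)\;=\;\tau_{e\to f}\,\cdot\,\gamma_f(p)\,\cdot\,\tau_{e\to f}^{-1}
\qquad\text{for every }p\in\mathrm{Spin}(T_x\mathcal{M})^+.
\end{equation*}
This is first checked on a single vector $v=v^i_f f_i=v^i_e e_i$ by combining the change of components $v^i_e=(T_{e\to f})^i_j v^j_f$ with \eqref{eq1-18-dec-2024}, and then extended to arbitrary $p=v_1\cdot v_2\cdots v_m\in\mathrm{Spin}(T_x\mathcal{M})^+$ by inserting $\tau_{e\to f}^{-1}\tau_{e\to f}$ between consecutive Clifford factors.

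Fiber-preservation is immediate: for each $x\in\mathcal{M}$ the element $\gamma_e^{-1}(\tau_{e\to f}(x))$ lies in $\mathrm{Spin}(T_x\mathcal{M})^+$, so $p\mapsto p\cdot\gamma_e^{-1}(\tau_{e\to f})$ stays inside the fiber over $\pi_{\mathrm{Spin}}(p)$. Smoothness follows because $\tau_{e\to f}$ is a smooth lift of the smooth map $T_{e\to f}$ through the covering $\lambda$, and Clifford multiplication is smooth in the bundle; the inverse is simply $q\mapsto q\cdot\gamma_e^{-1}(\tau_{e\to f})^{-1}$, which is also smooth. For the equivariance, given $u\in\mathrm{Spin}_{3,1}^+$ and writing $\star_e,\star_f$ for the two right actions defined in \eqref{eq1-16-dec-2024}, I will compute
\begin{equation*}
F_{f\to e}(p\star_f u)\;=\;p\cdot\gamma_f^{-1}(u)\cdot\gamma_e^{-1}(\tau_{e\to f}),
\qquad
F_{f\to e}(p)\star_e u\;=\;p\cdot\gamma_e^{-1}(\tau_{e\to f})\cdot\gamma_e^{-1}(u),
\end{equation*}
and reduce the required identity, after applying the fiberwise algebra isomorphism $\gamma_e$, to $\gamma_e\bigl(\gamma_f^{-1}(u)\bigr)\cdot\tau_{e\to f}=\tau_{e\to f}\cdot u$. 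Setting $p=\gamma_f^{-1}(u)$ in the intertwining identity above gives precisely this equality, which closes the equivariance check.

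Finally, for the compatibility with $\Lambda_e$ and $\Lambda_f$, I will use the computation already begun in the paragraph just above \eqref{eq1-17-dec-2024}: writing $f_j$ as a linear combination of the $e_i$ with coefficients $(T_{e\to f})^i_j$ and conjugating by $p$ column-by-column yields $\Lambda_f(p)=\Lambda_e(p)\,T_{e\to f}$; since $\lambda(\tau_{e\to f})=T_{e\to f}$ by the very definition \eqref{eq1-18-dec-2024}, one can rewrite $\Lambda_f(p)=\Lambda_e(p)\,\lambda(\tau_{e\to f})$ and then invoke Lemma~\ref{lem1-16-dec-2024} to collapse the right-hand side into $\Lambda_e\bigl(p\star_e\tau_{e\to f}\bigr)=\Lambda_e(F_{f\to e}(p))$. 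The main obstacle I anticipate is the bookkeeping of the two distinct right actions $\star_e$ and $\star_f$, which act on the same underlying set $\mathrm{Spin}^+(\mathcal{M})$ by different recipes; the intertwining identity above is precisely what converts one into the other, and once it is in hand the rest of the argument is essentially algebraic.
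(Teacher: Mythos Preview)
Your proposal is correct and follows essentially the same approach as the paper. Your intertwining identity $\gamma_e(p)=\tau_{e\to f}\cdot\gamma_f(p)\cdot\tau_{e\to f}^{-1}$ is exactly the lemma \eqref{eq4-18-dec-2024} that the paper isolates (stated there as $\tau_{e\to f}\cdot u\cdot\tau_{e\to f}^{-1}=\gamma_e\circ\gamma_f^{-1}(u)$), proved in the same way by first checking it on vectors via \eqref{eq1-18-dec-2024} and then extending multiplicatively; the equivariance check and the appeal to the pre-proposition computation for $\Lambda_f=\Lambda_e\circ F_{f\to e}$ are likewise identical.
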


By definition, an \emph{equivalence class} of such spin bundles over \(\mathcal{M} \) is called a  \emph{spin structure}. In other words, any two spin bundles arising from different global frames are isomorphic if they induce the same spin structure. This distinction between a spin bundle (which depends on a global choice of frame) and the spin structure (which is an equivalence class) is crucial for gauge invariance, and later on developing a gauge-invariant theory of existence for the Dirac operator.

}

%------------------------------------------------------------------------------------------------------------------------------------------------

{

\subsection{ Spinor bundle over a topologically trivial spacetime} 
\label{section===23}

\paragraph{Associated vector bundle.}

Let \(\pi : \mathcal{P} \to \mathcal{M} \) be a principal bundle with structure group \(G\).  Suppose \(V\) is a real vector space and \(\rho : G \to \mathrm{End}(V)\) is a linear representation.
The  \emph{associated vector bundle} \(\mathcal{P} \times_{\rho}V\mapsto \mathcal{M} \) is defined by
$
\mathcal{E} = \bigl(\mathcal{P} \times V\bigr)\,\bigl/\, G,
$
where \(G\) acts on \(\mathcal{P} \times V\) via
\be
(p,v)\,\cdot\, g 
= 
\bigl(p\cdot g,\; \rho(g)^{-1} \,v\bigr).
\ee
It can be shown that the above action is a {\sl principal action} (cf.~\cite[Lemma~4.7.1]{Hamilton-2017}).  Consequently, there is a unique smooth structure on \(\mathcal{E} \) making
\be
\Phi \;:\; \mathcal{P} \times V \;\longrightarrow\; \bigl(\mathcal{P} \times V\bigr)\,\bigl/\,G = \mathcal{E}
\ee
a surjective submersion.  By a standard property of surjective submersions (cf.\ \cite[Lemma~3.7.5]{Hamilton-2017}), there is a unique surjective submersion \(\pi_{\mathcal{E}} : \mathcal{E} \to \mathcal{M} \) such that
\[
\begin{tikzcd}[column sep=3em, row sep=3em]
\mathcal{P} \,\times\,V 
\arrow[rrd, "\;\pi \,\circ\,\mathrm{pr}_1"'] 
\arrow[d, "\Phi"'] 
&  &             \\
\mathcal{E} 
\arrow[rr, "\pi_{\mathcal{E}}"']                       
&  & \mathcal{M}
\end{tikzcd}
\]
Given \([p,v]\in \mathcal{E} \), we define a linear structure on each fiber \(\mathcal{E}_x := \pi_{\mathcal{E}}^{-1}(\{x\})\) by
\[
\alpha\,[p,v] \;+\; \beta\,[p,w] 
= [\,p,\;\alpha\,v + \beta\,w\,].
\]
One checks that \(\mathcal{E}_x\) is linearly isomorphic to \(V\). Hence \(\pi_{\mathcal{E}}:\mathcal{E} \to \mathcal{M} \) defines a vector bundle over \(\mathcal{M} \).

%-----------------------------------------------------------------------

\paragraph{\(\mathrm{SO}^+_{3,1} \)-bundle and the tangent bundle.}

Returning to our topologically trivial spacetime, we note that for \(V = \mathbb{R}^{3,1} \), there is an associated vector bundle of \(\mathrm{SO}^+(\mathcal{M})\) is isomorphic to the tangent bundle \(T\mathcal{M} \).  In fact, one simply observes that (with \(\rho\) the canonical action of \(\mathrm{SO}^+_{3,1} \) on \(\mathbb{R}^{3,1} \))
\be
\mathrm{SO}^+(\mathcal{M})\times_{\rho} \mathbb{R}^{3,1} \;\longrightarrow\; T\mathcal{M},
\quad
\bigl[(f_0,f_1,f_2,f_3),v\bigr]\;\longmapsto\; v^i\,f_i
\ee
is well-defined and provides an isomorphism of vector bundles (i.e.\ a diffeomorphism that is fiber-preserving and linear on each fiber).

%-----------------------------------------------------------------------

\paragraph{\(\mathrm{Spin}^+_{3,1} \)-bundle and Dirac spinor bundle.}

In a similar manner, consider the associated bundle
\[
\ourS_e(\mathcal{M}) 
= \mathrm{Spin}_e^+(\mathcal{M})\times_{\kappa} \mathbb{C}^4,
\]
where \(\kappa\) is the canonical representation of \(\mathrm{Spin}^+_{3,1} \) on \(\mathbb{C}^4\) induced by Clifford multiplication:
\be
v \,\cdot\, \psi = \gamma_e(v)\psi 
= v^i\,\gamma_i\,\psi,\quad v^ie_i = v.
\ee
If \((\,f_0,f_1,f_2,f_3)\) denotes another global orthonormal basis, then the induced isomorphism reads 
\[
F_{f \to e} : \ourS_f(\mathcal{M}) \;\longrightarrow\; \ourS_e(\mathcal{M}),
\quad
[p,\psi] \;\longmapsto\; \bigl[F_{f \to e}(p),\,\psi\bigr].
\]

\begin{lemma}
\label{lem1-05-jan-2025}
The map \(F_{f \to e} \) is well-defined on \(\ourS_f(\mathcal{M})\) and is an isomorphism of real vector bundles.
\end{lemma}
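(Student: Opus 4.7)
The plan is to reduce the claim to the principal bundle isomorphism $F_{f\to e}:\mathrm{Spin}_f^+(\mathcal{M})\to\mathrm{Spin}_e^+(\mathcal{M})$ already established in Proposition~\ref{prop1-19-dec-2024}, and then invoke the standard functoriality of the associated vector bundle construction reviewed in Section~\ref{section===23}.

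First, I would verify that the induced map on equivalence classes is well-defined. Two pairs $(p,\psi),(p',\psi')\in\mathrm{Spin}_f^+(\mathcal{M})\times\mathbb{C}^4$ represent the same element of $\ourS_f(\mathcal{M})$ precisely when there exists $g\in\mathrm{Spin}_{3,1}^+$ with $p'=p\star g$ (where $\star$ denotes the right action on $\mathrm{Spin}_f^+(\mathcal{M})$) and $\psi'=\kappa(g)^{-1}\psi$. Since Proposition~\ref{prop1-19-dec-2024} asserts, in the sense of Definition~\ref{def1-17-dec-2024}, that $F_{f\to e}$ is $\mathrm{Spin}_{3,1}^+$-equivariant, namely $F_{f\to e}(p\star g)=F_{f\to e}(p)\star g$ where on the right the action is the one on $\mathrm{Spin}_e^+(\mathcal{M})$, one obtains
\begin{equation*}
[F_{f\to e}(p'),\psi']=[F_{f\to e}(p)\star g,\kappa(g)^{-1}\psi]=[F_{f\to e}(p),\psi],
\end{equation*}
which yields well-definedness. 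Smoothness of the induced map is then inherited from that of $F_{f\to e}$ at the principal level via the universal property of the quotient $\Phi$ recalled in Section~\ref{section===23}; fiber preservation follows from $\Lambda_f=\Lambda_e\circ F_{f\to e}$; and linearity on each fiber is immediate from the explicit formula $\alpha[p,\psi]+\beta[p,\phi]=[p,\alpha\psi+\beta\phi]\mapsto [F_{f\to e}(p),\alpha\psi+\beta\phi]$. Finally, bijectivity is established by constructing the inverse via the symmetric map $F_{e\to f}$, built from the lift $\tau_{f\to e}$: the composition $F_{e\to f}\circ F_{f\to e}$ reduces, after unwinding, to right multiplication by $\gamma_e^{-1}(\tau_{e\to f})\cdot\gamma_f^{-1}(\tau_{f\to e})$, which lies in the kernel $\{\pm\mathbf{1}\}$ of $\lambda$ and therefore acts trivially after passing to the associated bundle (the sign cancels in $\kappa(\pm\mathbf{1})=\pm\mathrm{Id}$ since both representatives $(p,\psi)$ and $(-p,-\psi)$ yield the same class).

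The main obstacle, which is already packaged into Proposition~\ref{prop1-19-dec-2024}, is the $\mathrm{Spin}_{3,1}^+$-equivariance of $F_{f\to e}$ at the principal level, because the two right actions $\star$ on $\mathrm{Spin}_f^+(\mathcal{M})$ and $\mathrm{Spin}_e^+(\mathcal{M})$ are defined via the \emph{distinct} coordinate maps $\gamma_f^{-1}$ and $\gamma_e^{-1}$ in \eqref{eq1-16-dec-2024}. Unpacking the definition \eqref{eq1-17-dec-2024}, equivariance amounts to the pointwise identity
\begin{equation*}
p\cdot\gamma_f^{-1}(g)\cdot\gamma_e^{-1}(\tau_{e\to f})=p\cdot\gamma_e^{-1}(\tau_{e\to f})\cdot\gamma_e^{-1}(g)
\quad \text{in }\mathrm{Spin}(T_x\mathcal{M})^+,
\end{equation*}
i.e.\ to the conjugation relation $\gamma_e^{-1}(\tau_{e\to f})^{-1}\cdot\gamma_f^{-1}(g)\cdot\gamma_e^{-1}(\tau_{e\to f})=\gamma_e^{-1}(g)$, which in turn is nothing but the defining identity \eqref{eq1-18-dec-2024} for $\tau_{e\to f}$, extended from the basis vectors to arbitrary $g\in\mathrm{Spin}_{3,1}^+$ by the group law. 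Granting this, the lemma follows as a direct corollary of Proposition~\ref{prop1-19-dec-2024} together with the associated-bundle formalism.
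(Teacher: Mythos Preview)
Your well-definedness argument via the $\mathrm{Spin}_{3,1}^+$-equivariance of $F_{f\to e}$ from Proposition~\ref{prop1-19-dec-2024} is exactly the paper's approach, and your unpacking of that equivariance in the second paragraph correctly reduces it to \eqref{eq4-18-dec-2024} (the paper carries this out in its proof of Proposition~\ref{prop1-19-dec-2024} rather than here). The paper's own proof of the lemma is more minimal than yours: it spells out only well-definedness, fiber preservation, and fiberwise linearity, leaving smoothness and bijectivity implicit as direct consequences of $F_{f\to e}$ being a principal bundle isomorphism.

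Your bijectivity argument, however, contains a genuine error. If the lifts $\tau_{e\to f}$ and $\tau_{f\to e}$ are chosen independently and happen to satisfy $\tau_{f\to e}\,\tau_{e\to f}=-I_4$, then $F_{e\to f}\circ F_{f\to e}$ at the principal level is right Clifford multiplication by an element $c\in\mathrm{Spin}(T_x\mathcal{M})^+$ with $\gamma_f(c)=-I_4$, and on the associated bundle this sends
\[
[p,\psi]\longmapsto[p\cdot c,\psi]=[p\star_f(-I_4),\psi]=[p,\kappa(-I_4)\psi]=[p,-\psi],
\]
which is \emph{not} $[p,\psi]$. Your parenthetical remark that $(p,\psi)$ and $(-p,-\psi)$ represent the same class is correct but irrelevant: the map produces $[-p,\psi]$, not $[-p,-\psi]$, and the sign does not cancel. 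The fix is immediate: either declare $\tau_{f\to e}:=\tau_{e\to f}^{-1}$ (forcing the product to be $+I_4$), or, more simply, note that $F_{f\to e}$ is already a diffeomorphism of principal bundles by Proposition~\ref{prop1-19-dec-2024}, so its set-theoretic inverse induces the inverse on associated bundles by the same functoriality you invoked for the forward direction.
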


\begin{proof}
Let \(\bigl[q,\phi\bigr] = \bigl[p,\psi\bigr]\in \ourS_f(\mathcal{M})\).  Then \((q,\phi)\) and \((p,\psi)\) are related by some \(u\in\mathrm{Spin}_{3,1}^+\), so
\[
q = p\,\star\,u = p \,\cdot\, \gamma_f^{-1}(u),
\quad
\phi = u^{-1} \,\psi.
\]
By Proposition~\ref{prop1-19-dec-2024}, \(F_{f \to e} \) is a bundle isomorphism and equivariant under \(\mathrm{Spin}_{3,1}^+\), hence 
\[
F_{f \to e}(q) 
 =  F_{f \to e} \bigl(p\,\star\,u\bigr) 
 =  F_{f \to e}(p)\,\star\,u.
\]
Therefore, we have 
\[
\bigl[F_{f \to e}(q),\,\phi\bigr]
 = 
\bigl[F_{f \to e}(p)\,\star\,u,\;\phi\bigr]
 = 
\bigl[F_{f \to e}(p),\;u^{-1} \phi\bigr]
 = 
\bigl[F_{f \to e}(p),\,\psi\bigr],
\]
showing \(F_{f \to e} \) is well-defined. It is also direct that $F_{f \to e}$ preserves the fiber and on each fiber $F_{f \to e}$ is linear. 
\end{proof}
\medskip

Moreover, we particularly note that
\begin{equation} \label{eq5-01-jan-2025}
F_{f \to e} \bigl([p,\psi]\bigr) 
 =  
\bigl[p,\;\tau_{e \to f}^{-1} \,\psi\bigr].
\end{equation}
For $v\in T_x\mathcal{M}$, we also define the  \emph{Clifford multiplication} as follows (cf. Hamilton~\cite[Proposition 6.9.13]{Hamilton-2017}).} Let $s\in \mathrm{Spin}^+(T_x\mathcal{M})$. Then
\begin{equation}
v\cdot [s,\psi] := [s,\gamma_e(s^{-1}\cdot v\cdot s) \psi], 
\end{equation}
where $s^{-1}\cdot v\cdot s$ is a product in $\mathrm{Spin}^+(T_x\Mcal)$. It is easy to check that this definition does not depend on the chose of representative, and it can be verified that
\begin{equation}
v\cdot[1,\psi] = [1,\gamma_e(v)\psi],
\end{equation}
where $1$ is the unit element in $\mathrm{Spin}^+(T_x\Mcal)$. That is, we are acutely defining the Clifford multiplication via the coordinates (see the notion of {\sl coordinate} below \eqref{eq2-30-aout-2025}). Furthermore,
\begin{equation} \label{eq5-08-jan-2025}
F_{f \to e}(v\cdot[s,\psi]) = v\cdot F_{f \to e}([s,\psi]).
\end{equation}
To see this, we perform the following calculation:
\be
\aligned
v\cdot F_{f \to e}([s,\psi]) & =  v\cdot [F_{f \to e}(s),\psi] \stackrel{\eqref{eq5-01-jan-2025}} = v\cdot[s,\tau_{e \to f}^{-1}\psi] = [s,\gamma_e(s^{-1}\cdot v\cdot s)\cdot\tau_{e \to f}^{-1}\psi],
\\
F_{f \to e}(v\cdot[s,\psi]) & =  F_{f \to e}([s,\gamma_f(s^{-1}\cdot v\cdot s)\psi]) = [F_{f \to e}(s),\gamma_f(s^{-1}\cdot v\cdot s)\psi]
\\ 
& \stackrel{\eqref{eq5-01-jan-2025}} = [s,\tau_{e \to f}^{-1}\cdot\gamma_f(s^{-1}\cdot v\cdot s)\psi].
\endaligned
\ee
Then we apply the following lemma.

\begin{lemma}\label{lem1-30-aout-2025}
For any $u\in\mathrm{Spin}^+(T_x\Mcal)$, one has 
\begin{equation}\label{eq1-30-aout-2025}
\gamma_f(u) = \tau_{e \to f}\cdot\gamma_e(u)\cdot\tau_{e \to f}^{-1}.
\end{equation}
\end{lemma}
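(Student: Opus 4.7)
The plan is to reduce the identity to the case of a single vector $v \in T_x\Mcal$ and then extend by multiplicativity, using the fact that both sides of \eqref{eq1-30-aout-2025} are multiplicative in $u$. Concretely, conjugation by $\tau_{e \to f}$ is an algebra automorphism of $\mathrm{Cl}_{3,1}$, while $\gamma_e$ and $\gamma_f$ are Lie group isomorphisms $\mathrm{Spin}^+(T_x\Mcal) \to \mathrm{Spin}_{3,1}^+$; hence both sides of \eqref{eq1-30-aout-2025} define Lie group homomorphisms in $u$. Since by definition $\mathrm{Spin}^+(T_x\Mcal)$ is generated by products of an even number of vectors drawn from $\ourS(T_x\Mcal)^+ \cup \ourS(T_x\Mcal)^-$, it will suffice to check the identity when $u = v$ is a single vector in $T_x\Mcal$.

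The first step will be purely computational: I would write $v = v_e^i e_i = v_f^j f_j$, and use the defining relation $(f_0,f_1,f_2,f_3) = (e_0,e_1,e_2,e_3)\, T_{e \to f}$ to deduce the component transformation $v_e^i = (T_{e \to f})^i_j\, v_f^j$. Then
\begin{equation*}
\gamma_e(v) \;=\; v_e^i\, \gamma_i \;=\; v_f^j\, (T_{e \to f})^i_j\, \gamma_i.
\end{equation*}
Next I would invoke the key relation \eqref{eq1-18-dec-2024}, read component-by-component, which states exactly that
\begin{equation*}
(T_{e \to f})^i_j\, \gamma_i \;=\; \tau_{e \to f}\cdot \gamma_j \cdot \tau_{e \to f}^{-1},
\end{equation*}
so that
\begin{equation*}
\gamma_e(v) \;=\; \tau_{e \to f} \cdot \bigl(v_f^j\, \gamma_j\bigr)\cdot \tau_{e \to f}^{-1} \;=\; \tau_{e \to f} \cdot \gamma_f(v) \cdot \tau_{e \to f}^{-1}.
\end{equation*}
This is the vector-level version of \eqref{eq1-30-aout-2025} (the sign of the lift $\tau_{e \to f}$, fixed up to $\pm I_4$ by \eqref{eq1-18-dec-2024}, drops out of the two-sided conjugation so the identity is unambiguous).

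The final step will be the extension to $u \in \mathrm{Spin}^+(T_x\Mcal)$ by multiplicativity: writing $u = v_1 \cdot v_2 \cdots v_{2k}$ with each $v_\ell$ a unit (or anti-unit) vector, I would apply the vector case factor by factor and telescope the conjugations, producing
\begin{equation*}
\gamma_e(u) \;=\; \gamma_e(v_1)\cdots \gamma_e(v_{2k}) \;=\; \tau_{e\to f}\cdot \gamma_f(v_1)\cdots \gamma_f(v_{2k})\cdot \tau_{e\to f}^{-1} \;=\; \tau_{e\to f}\cdot \gamma_f(u)\cdot \tau_{e\to f}^{-1}.
\end{equation*}

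I do not expect any genuine obstacle in this argument: the proof is essentially a translation between the transformation law for vectors (in the frame bundle picture) and the transformation law for Clifford products (in the spin bundle picture), and all the real work has already been encoded in the choice of lift $\tau_{e\to f}$ via \eqref{eq1-18-dec-2024}. The only point that will require a small amount of care is the bookkeeping between the frame convention $f = e\, T_{e\to f}$, the induced dual convention on components $v_e = T_{e\to f}\, v_f$, and the resulting orientation of the conjugation by $\tau_{e\to f}$ so that the sides of \eqref{eq1-30-aout-2025} line up as stated.
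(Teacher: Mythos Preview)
Your approach is exactly the paper's: reduce to the case of a single vector, invoke the defining relation \eqref{eq1-18-dec-2024} for $\tau_{e\to f}$, then extend by multiplicativity to all of $\mathrm{Spin}^+(T_x\Mcal)$. The multiplicativity step is fine.

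There is, however, a genuine bookkeeping issue---precisely the one you flagged at the end. Your derivation is correct and yields
\[
\gamma_e(v) \;=\; \tau_{e\to f}\cdot \gamma_f(v)\cdot \tau_{e\to f}^{-1},
\]
which is the identity with the roles of $e$ and $f$ \emph{reversed} relative to the stated lemma $\gamma_f(u) = \tau_{e\to f}\cdot \gamma_e(u)\cdot \tau_{e\to f}^{-1}$. Your version agrees with the equivalent statement \eqref{eq4-18-dec-2024} proved in the appendix (Section~\ref{section=N19}), where the same computation is carried out and gives $\gamma_e\circ\gamma_f^{-1}(u) = \tau_{e\to f}\cdot u\cdot\tau_{e\to f}^{-1}$. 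The paper's in-text proof of the present lemma arrives at the stated direction only via an apparent slip: immediately after recording $T_{e\to f}\,w = v$, it substitutes $w = T_{e\to f}\,v$ (rather than $w = T_{e\to f}^{-1}\,v$) in the next displayed line. So your computation is the right one; the discrepancy lies in the statement (or equivalently in which of $\tau_{e\to f}$ and $\tau_{e\to f}^{-1}$ appears). You should check which orientation is actually required in the application to \eqref{eq5-08-jan-2025} and adjust the sign of the conjugation accordingly.
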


\begin{proof}
\bse
When $u$ is a vector, viz. $u\in T_x\Mcal$, let
\be
\gamma_e(u) = 
(\gamma_0,\gamma_1,\gamma_2,\gamma_3)
\left(\begin{array}{c}
v^0
\\
v^1
\\
v^2
\\
v^3
\end{array}
\right),
\quad
\gamma_f(u) = 
(\gamma_0,\gamma_1,\gamma_2,\gamma_3)
\left(\begin{array}{c}
w^0
\\
w^1
\\
w^2
\\
w^3
\end{array}
\right).
\ee
Then, denote by $v = (v^0,v^1,v^2,v^3)^{\mathrm{T}}$, $w = (w^0,w^1,w^2,w^3)^{\mathrm{T}}$,
\be
T_{e \to f}w = v 
\ee
which leas to, thanks to \eqref{eq1-18-dec-2024},
\be
\aligned
\gamma_f(u) & =(\gamma_0,\gamma_1,\gamma_2,\gamma_3)w 
= (\gamma_0,\gamma_1,\gamma_2,\gamma_3)T_{e \to f}v 
= \tau_{e \to f}\cdot(\gamma_0,\gamma_1,\gamma_2,\gamma_3)\cdot\tau_{e \to f }^{-1}v
\\
\stackrel{*}{} & = \tau_{e \to f}\cdot(\gamma_0,\gamma_1,\gamma_2,\gamma_3)v\cdot\tau_{e \to f }^{-1}
=\tau_{e \to f}\cdot\gamma_e(u)\cdot\tau_{e \to f }^{-1},
\endaligned
\ee
where the equality $*$ is due to the fact that $\tau_{e \to f}^{-1}\cdot$ is real-linear.

When $u$ is a Clifford product of finite many vectors, it is obvious that \eqref{eq1-30-aout-2025} also holds.

\ese
\end{proof}

{
Finally, we close this presentation by emphasizing that a smooth section of $\ourS_e(\mathcal{M})$ is called a {\bf spinor field} and, more precisely,
\begin{equation}\label{eq2-30-aout-2025}
\Psi: \mathcal{M}\mapsto \ourS_e(\mathcal{M}), \quad x\mapsto [p(x),\psi(x)].
\end{equation}
We can fix $p(x) = 1\in\mathrm{Spin}_e^+(\mathcal{M})$ and, consequently, $\Psi = [1,\psi]$. Therefore, the section $\Psi$ is associated with a smooth $\mathbb{C}^4$--valued function, which is called the \emph{coordinates of the spinor field} $\Psi$. It is clear that different choices of $p$ lead to different $\psi$. Clearly, when $q = p\cdot u$, we find 
\be
[p,\psi] = [q,\phi] \quad\longrightarrow \quad \phi = u\psi.
\ee

}

%=============================================================================

\section{Gauge-invariant differentiation of spinor fields}
\label{section=N18}

\subsection{ Spin connection}
\label{subsec1-09-jan-2025}

{

\paragraph{Connections on a principal bundle and the associated vector bundles.}

We continue our description in Section~\ref{section=N17} of a framework to work with spinor fields in a curved spacetime. Let $\pi:\mathcal{P} \mapsto \mathcal{M}$ be a principal bundle with structure group $G$. We recall the notion of vertical distribution of $\mathcal{P}$, that is, 
\be
\mathscr{V}_p = \ker(\diff_p\pi) = T_{p} \mathcal{P}_x, \qquad x = \pi(p), 
\ee
which is the tangent space to the fiber $\mathcal{P}_x$ (an embedded submanifold of $\mathcal{P})$.
An  \emph{Ehresmann connection} (also called a principal connection) defined on $\mathcal{P}$ is a right-invariant horizontal distribution, that is, a distribution $\mathscr{H}$ of $\mathcal{P}$ such that 
\begin{equation}
\aligned
&T_p\mathcal{P} = \mathscr{V}_p\oplus \mathscr{H}_p,
\qquad\quad
&(r_g)_*\mathscr{H}_p = \mathscr{H}_{p\cdot g}\quad \text { for all }  g\in G. 
\endaligned
\end{equation}
Let $\mathcal{E} = \mathcal{P} \times_{\rho}V$ be an associated vector bundle of structure group $G$. We also recall the vertical distribution
\be
\mathscr{V}^{\mathcal{E}}_{[p,v]} := \ker(\diff_{[p,v]} \pi_{\mathcal{E}}) = T_{[p,v]} \mathcal{E}_x, \qquad x=\pi_{\mathcal{E}}([p,v]) = \pi(p).
\ee
A  \emph{connection} is a distribution $\mathscr{H}^\mathcal{E}$ defined on $\mathcal{E}$ such that
\be
\mathscr{H}^{\mathcal{E}}_{[p,v]} \oplus \mathscr{V}^{\mathcal{E}}_{[p,v]} = T\mathcal{E}_{[p,v]}.
\ee
We establish now that a connection defined on a principal bundle induces a connection defined on the associated vector bundle.

\begin{proposition} \label{prop1-01-jan-2025}
Let $\mathscr{H}$ be an Ehresmann connection defined on a principal bundle $\mathcal{P}$. Let $\Phi: \mathcal{P} \times V\mapsto \mathcal{E} = \mathcal{P} \times_{\rho}V$ be the canonical projection. Then 
\be
\mathscr{H}^{\mathcal{E}}_{[p,v]} := \{\diff_{(p,v)} \Phi(Y,0)|Y\in \mathscr{H}_p\} 
\ee
is a connection defined on $\mathcal{E}$, that is,
\begin{equation} \label{eq2-22-dec-2024}
\mathscr{H}^{\mathcal{E}} \oplus \mathscr{V}^{\mathcal{E}} = T\mathcal{E},
\end{equation}
which is referred to as the \emph{associated connection} on the vector bundle. 
\end{proposition}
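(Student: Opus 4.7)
The plan is to verify the three defining properties of a connection --- well-definedness of the horizontal subspace, its smoothness, and the direct-sum decomposition with the vertical distribution --- by systematically exploiting the fact that $\Phi$ is invariant under the diagonal $G$-action $R_g(p,v) = (p\cdot g,\rho(g)^{-1}v)$ on $\mathcal{P}\times V$.

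First I would show that $\mathscr{H}^{\mathcal{E}}_{[p,v]}$ does not depend on the chosen representative. Given another representative $(p\cdot g,\rho(g)^{-1}v)$ and any $Y' \in \mathscr{H}_{p\cdot g}$, the right-invariance of the Ehresmann connection yields $Y' = (r_g)_\ast Y$ for some $Y \in \mathscr{H}_p$. A direct calculation shows $dR_g|_{(p,v)}(Y,0) = ((r_g)_\ast Y, 0)$, and differentiating the identity $\Phi \circ R_g = \Phi$ gives
\begin{equation*}
d_{(p\cdot g,\rho(g)^{-1}v)}\Phi(Y',0)
= d_{(p,v)}(\Phi\circ R_g)(Y,0)
= d_{(p,v)}\Phi(Y,0),
\end{equation*}
so the subspaces defined at the two representatives coincide.

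Next I would compute the dimension of $\mathscr{H}^{\mathcal{E}}_{[p,v]}$ by analyzing the kernel of the linear map $Y \mapsto d_{(p,v)}\Phi(Y,0)$ restricted to $\mathscr{H}_p$. Since $\Phi$ is the quotient map for a free proper action, $\ker d_{(p,v)}\Phi$ coincides with the tangent space to the $G$-orbit through $(p,v)$, namely the set of vectors of the form $(X^\ast_p, -\rho_\ast(X)v)$ as $X$ ranges over the Lie algebra $\mathfrak{g}$, where $X^\ast$ denotes the fundamental vector field on $\mathcal{P}$. Requiring the second component to vanish forces in particular that the first component $Y = X^\ast_p$ lies in the vertical subspace $\mathscr{V}_p$; intersecting with $\mathscr{H}_p$ and using $\mathscr{V}_p \cap \mathscr{H}_p = \{0\}$ gives $Y = 0$, so the map is injective and $\dim\mathscr{H}^{\mathcal{E}}_{[p,v]} = \dim\mathscr{H}_p = \dim\mathcal{M}$. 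Smoothness of the distribution then follows because any smooth local horizontal frame on $\mathcal{P}$ is pushed forward by $\Phi$ to a smooth local frame for $\mathscr{H}^{\mathcal{E}}$.

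Finally, I would establish the decomposition \eqref{eq2-22-dec-2024}. The dimensions already add correctly: $\dim\mathscr{H}^{\mathcal{E}}_{[p,v]} + \dim\mathscr{V}^{\mathcal{E}}_{[p,v]} = \dim\mathcal{M} + \dim V = \dim\mathcal{E}$. It thus suffices to show $\mathscr{H}^{\mathcal{E}}_{[p,v]} \cap \mathscr{V}^{\mathcal{E}}_{[p,v]} = \{0\}$: if $d\Phi(Y,0)$ is vertical for some $Y \in \mathscr{H}_p$, then $d\pi_{\mathcal{E}}\circ d\Phi(Y,0) = 0$; but the relation $\pi_{\mathcal{E}}\circ \Phi = \pi \circ \mathrm{pr}_1$ from the commutative diagram implies $d\pi(Y) = 0$, whence $Y \in \mathscr{V}_p \cap \mathscr{H}_p = \{0\}$ and consequently $d\Phi(Y,0) = 0$. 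The main technical step, which I expect to be the trickiest to present cleanly, is the kernel computation in the third paragraph, since it requires identifying orbit tangents on the product $\mathcal{P}\times V$ in terms of fundamental vector fields and the infinitesimal representation $\rho_\ast$; everything else reduces to equivariance of $\Phi$ together with the defining properties of an Ehresmann connection.
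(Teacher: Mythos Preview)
Your proposal is correct and in fact more thorough than the paper's proof, which addresses only the direct-sum decomposition and omits well-definedness and smoothness entirely. For the intersection $\mathscr{H}^{\mathcal{E}} \cap \mathscr{V}^{\mathcal{E}} = \{0\}$ you and the paper use the identical argument via $\pi_{\mathcal{E}}\circ\Phi = \pi\circ\mathrm{pr}_1$.

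The one genuine difference is in the dimension count. You compute $\dim\mathscr{H}^{\mathcal{E}}_{[p,v]}$ by identifying $\ker d\Phi$ with the orbit tangent space $\{(X^\ast_p,-\rho_\ast(X)v)\}$ and intersecting with $\mathscr{H}_p\times\{0\}$. The paper instead reuses the relation $d\pi_{\mathcal{E}}\circ d\Phi(Y,0) = d\pi(Y)$: since $d\pi|_{\mathscr{H}_p}$ is a linear isomorphism onto $T_x\mathcal{M}$, the composite factoring through $d\Phi$ forces $\dim\mathscr{H}^{\mathcal{E}}_{[p,v]}\geq \dim\mathcal{M}$, while the reverse inequality is immediate. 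This is shorter and avoids fundamental vector fields altogether; your kernel argument is correct but unnecessarily heavy here, since the very relation you invoke at the end already does the job. (A minor point of phrasing: in your kernel argument, the fact that $Y=X^\ast_p$ is vertical follows from the kernel structure alone, not from the second-component condition $\rho_\ast(X)v=0$; fundamental vector fields are always vertical.)
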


\begin{proof}
We first show that $\mathscr{H}^{\mathcal{E}} \cap \mathscr{V}^{\mathcal{E}} = \{0\}$. Recall that $\pi_{\Ecal}\circ\Phi = \pi$. Thus at $(p,v)\in P\times V$, we have 
$$
\diff_{[p,v]} \pi_{\mathcal{E}} \circ \diff_{(p,v)} \Phi(Y,0) = \diff_p\pi(Y).
$$
Let \(Y\in \mathscr{H}_p\) and set \(X := \diff_{(p,v)} \Phi(Y,0)\in \mathscr{H}^{\mathcal{E}}{[p,v]} \). If X also belongs to the vertical subspace \(\mathscr{V}^{\mathcal{E}}_{[p,v]} = \ker(\diff\pi_{\mathcal{E}})\), then
$$
\diff_{[p,v]} \pi_{\mathcal{E}} \bigl(\diff_{(p,v)} \Phi(Y,0)\bigr) = \diff_p\pi(Y) = 0.
$$
Since $\diff_p\pi$ is injective on the horizontal space \(\mathscr{H}_p\) (since \(T_p\mathcal{P} = \mathscr{H}_p \oplus \mathscr{V}_p\)), it follows that $Y=0$ and hence $X=0$. Then the desired relation $\mathscr{H}^{\mathcal{E}} \cap \mathscr{V}^{\mathcal{E}} = \{0\}$ follows.

On the other hand, since the map 
$
\diff_p\pi|_{\mathscr{H}_p}
$
is linear and surjective, the map 
$$
\diff_{[p,v]} \pi_{\mathcal{E}} \circ \diff_{(p,v)} \Phi|_{\mathscr{H} \times\{0\}}  = \diff_p\pi|_{\mathscr{H}_p}: \mathscr{H} \times\{0\} \mapsto T_{\pi(p)} \mathcal{M}
$$
is linear and surjective. We deduce that 
$$\dim(\mathcal{M}) = \dim(\mathscr{H}_p)\geq \dim(\diff_{(p,v)}\Phi|_{\mathscr{H} \times\{0\}}) \geq \dim(\mathcal{M}),
$$ which leads us to \(\mathscr{H}^{\mathcal{E}} \oplus \mathscr{V}^{\mathcal{E}} = T\mathcal{E} \). In other words, every tangent vector in $T_{[p,v]} \mathcal{E}$ can be uniquely decomposed into a horizontal component (lifted from $\mathscr{H}_p$ via $\Phi$) and a vertical component (tangent to the fiber), which is precisely the defining property of a connection on $\mathcal{E}$. This leads us to \eqref{eq2-22-dec-2024}.
\end{proof}

%-----------------------------------------------------------------------

\paragraph{Spin connection associated with a choice of frame.}

Let us state the following classical result whose proof is given in~Section~\ref{Appendix--B5}.  

\begin{proposition}[Existence of the Levi-Civita connection on $\mathrm{SO}^+$] 
\label{thm2-08-jan-2025}
Let $(\mathcal{M},g)$ be a oriented and time-oriented spacetime with trivial topology, and let $\mathrm{SO}^+(\mathcal{M})$ denote the bundle of orthochronous frames on $\mathcal{M}$. Then there exists a unique Ehresmann connection $\mathscr{H}^{\mathrm{SO}}$ that \emph{induces the Levi-Civita connection} on $T\mathcal{M}$. 
\end{proposition}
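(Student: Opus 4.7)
My plan is to transfer the Levi--Civita connection $\nabla$ on $T\mathcal{M}$ to a principal connection 1-form on $\mathrm{SO}^+(\mathcal{M})$ and to read off $\mathscr{H}^{\mathrm{SO}}$ as its kernel. First I would fix a global orthonormal frame $e = (e_0,\ldots,e_3)$, which exists because $\mathcal{M}\cong\mathbb{R}^4$, and form the $\mathfrak{so}_{3,1}$-valued 1-form $\omega^e$ on $\mathcal{M}$ defined by $\nabla_X e_j = \omega^e(X)^i{}_j \, e_i$. Metric compatibility of $\nabla$ guarantees that the matrix $\omega^e(X)$ belongs to $\mathfrak{so}_{3,1}$ at every point, and torsion-freeness fixes it uniquely through the first Cartan structure equation. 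Using the global trivialization $\mathcal{T}^e_{\mathrm{SO}}: \mathrm{SO}^+(\mathcal{M}) \to \mathcal{M}\times \mathrm{SO}^+_{3,1}$, I would then define the candidate connection 1-form $\omega^{\mathrm{SO}}$ on $\mathrm{SO}^+(\mathcal{M})$ via the classical gluing formula $\omega^{\mathrm{SO}}_{(x,A)} := \mathrm{Ad}(A^{-1}) \bigl(\pi^*\omega^e\bigr) + A^{-1}\,\diff A$, where $A^{-1}\diff A$ denotes the Maurer--Cartan form on $\mathrm{SO}^+_{3,1}$, and finally set $\mathscr{H}^{\mathrm{SO}} := \ker \omega^{\mathrm{SO}}$.

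Next I would check the two defining axioms of a principal connection 1-form. The Maurer--Cartan summand reproduces the identity on each fundamental vertical vector field, while the conjugation prefactor $\mathrm{Ad}(A^{-1})$ supplies right-equivariance $R_g^* \omega^{\mathrm{SO}} = \mathrm{Ad}(g^{-1}) \omega^{\mathrm{SO}}$ by a direct calculation. Together these yield a right-invariant distribution $\mathscr{H}^{\mathrm{SO}}$ complementary to the vertical one, that is, an Ehresmann connection in the sense of Section~\ref{subsec1-09-jan-2025}. The main subtle point will be to show that $\mathscr{H}^{\mathrm{SO}}$ does not depend on the auxiliary choice of $e$: if $f = e\cdot T$ for a smooth map $T:\mathcal{M}\to \mathrm{SO}^+_{3,1}$, a standard gauge computation delivers $\omega^f = T^{-1}\omega^e T + T^{-1}\diff T$, and this change is absorbed exactly by the corresponding modification of the Maurer--Cartan piece under the transition between the two trivializations, so that $\omega^{\mathrm{SO}}$ and hence $\mathscr{H}^{\mathrm{SO}}$ are frame-independent.

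Then I would verify that the associated connection on $T\mathcal{M} \cong \mathrm{SO}^+(\mathcal{M}) \times_\rho \mathbb{R}^{3,1}$ produced by Proposition~\ref{prop1-01-jan-2025} coincides with $\nabla$. This reduces to a local computation: decomposing a section $X = X^i e_i$ of $T\mathcal{M}$ and lifting it horizontally through $\mathscr{H}^{\mathrm{SO}}$, one reads off the induced covariant derivative as $\bigl(\diff X^i + (\omega^e)^i{}_j X^j\bigr)\otimes e_i$, which is precisely $\nabla X$ by the very definition of $\omega^e$. For uniqueness, assume $\mathscr{H}'$ is another Ehresmann connection on $\mathrm{SO}^+(\mathcal{M})$ inducing $\nabla$. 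Two principal connections inducing the same connection on an associated vector bundle must coincide whenever the structure group acts faithfully on the typical fiber, since horizontal lifts of curves in $\mathcal{M}$ are then uniquely determined by parallel transport of a single frame in the associated bundle; because the defining action of $\mathrm{SO}^+_{3,1}$ on $\mathbb{R}^{3,1}$ is faithful, we conclude $\mathscr{H}' = \mathscr{H}^{\mathrm{SO}}$, giving the claimed uniqueness.
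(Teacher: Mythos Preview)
Your proposal is correct but takes a genuinely different route from the paper. The paper (in Section~\ref{Appendix--B5}) constructs $\mathscr{H}^{\mathrm{SO}}$ directly from parallel transport: for each frame $p=(p_0,\dots,p_3)$ and each $X\in T_x\mathcal{M}$, it defines the horizontal lift $H_p(X)$ as the velocity at $t=0$ of the curve $t\mapsto (E_0(t),\dots,E_3(t))$ obtained by Levi--Civita parallel-transporting each $p_i$ along a curve with $\dot\gamma(0)=X$, and then checks by hand that the image of $H_p$ is a smooth, right-invariant horizontal distribution (Proposition~\ref{la--construction}). You instead package the local gauge form $\omega^e$ into a principal connection 1-form via the standard Maurer--Cartan formula and take its kernel. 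The paper's route is more hands-on geometric and avoids invoking the connection-form formalism, while yours is more algebraic and dovetails neatly with the formula~\eqref{eq6-01-jan-2025} already recorded in the paper. Your uniqueness argument via faithfulness of the defining representation of $\mathrm{SO}^+_{3,1}$ on $\mathbb{R}^{3,1}$ is also more explicit than what the paper provides, where uniqueness is essentially left implicit in the parallel-transport construction.
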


We recall that by Proposition~\ref{thm1-31-dec-2024}, the map $\Lambda_e$ is a local diffeomorphism. Thus we define, for $p\in \mathrm{Spin}^+(\mathcal{M})$
$$
\mathscr{H}^{\mathrm{Spin}}_{e,p}:= \big\{Y\in T_p\mathrm{Spin}^+(\mathcal{M})| \diff_p\Lambda_e(Y)\in\mathscr{H}^{\mathrm{SO}}_{\Lambda_e(p)} \big\}.
$$
Then we establish the following property. At this stage, we are still relying on a frame $e$ (which will be eventually ``suppressed''). 

\begin{proposition}
The subspace $\mathscr{H}^{\mathrm{Spin}}_{e,p}$ forms a smooth distribution and defines a connection on $\mathrm{Spin}^+_e(\mathcal{M})$.
\end{proposition}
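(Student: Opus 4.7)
The plan is to transfer the connection structure from $\mathrm{SO}^+(\Mcal)$ to $\mathrm{Spin}_e^+(\Mcal)$ by pull-back along the local diffeomorphism $\Lambda_e$. First, I would record that since $\Lambda_e$ is a $2$-covering, at every point $p$ the differential $\diff_p\Lambda_e$ is a linear isomorphism $T_p\mathrm{Spin}^+_e(\Mcal)\to T_{\Lambda_e(p)}\mathrm{SO}^+(\Mcal)$, and it maps the vertical subspace $\mathscr{V}_p$ of $\mathrm{Spin}^+_e(\Mcal)$ isomorphically onto the vertical subspace $\mathscr{V}_{\Lambda_e(p)}$ of $\mathrm{SO}^+(\Mcal)$; this is because $\Lambda_e$ is fiber-preserving by Lemma~\ref{lem1-16-dec-2024} and the fibers $\mathrm{Spin}(T_x\Mcal)^+$ and $\mathrm{SO}(T_x\Mcal)^+$ have the same dimension. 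Consequently,
\[
\mathscr{H}^{\mathrm{Spin}}_{e,p}=\bigl(\diff_p\Lambda_e\bigr)^{-1}\!\bigl(\mathscr{H}^{\mathrm{SO}}_{\Lambda_e(p)}\bigr)
\]
is a linear subspace of $T_p\mathrm{Spin}^+_e(\Mcal)$ of dimension $\dim\Mcal$, and smoothness of the distribution follows at once from the smoothness of $\Lambda_e$, of $\mathscr{H}^{\mathrm{SO}}$, and from the fact that pull-back of a smooth distribution by a local diffeomorphism is smooth (locally one can exhibit a smooth frame of $\mathscr{H}^{\mathrm{SO}}$ and pull it back by $\Lambda_e^{-1}$).

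Next I would verify the direct-sum decomposition $T_p\mathrm{Spin}^+_e(\Mcal)=\mathscr{V}_p\oplus\mathscr{H}^{\mathrm{Spin}}_{e,p}$. For the intersection, if $Y\in\mathscr{V}_p\cap\mathscr{H}^{\mathrm{Spin}}_{e,p}$, then $\diff_p\Lambda_e(Y)$ lies in $\mathscr{V}_{\Lambda_e(p)}\cap\mathscr{H}^{\mathrm{SO}}_{\Lambda_e(p)}=\{0\}$ (using that $\diff\Lambda_e$ sends vertical to vertical, together with Proposition~\ref{thm2-08-jan-2025} which provides a genuine Ehresmann decomposition below); since $\diff_p\Lambda_e$ is injective one concludes $Y=0$. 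For the dimension count, $\dim\mathscr{V}_p+\dim\mathscr{H}^{\mathrm{Spin}}_{e,p}=\dim\mathrm{Spin}^+_{3,1}+\dim\Mcal=\dim T_p\mathrm{Spin}^+_e(\Mcal)$, so the sum is indeed all of $T_p\mathrm{Spin}^+_e(\Mcal)$.

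Finally I would check right-invariance under $\mathrm{Spin}^+_{3,1}$. By Lemma~\ref{lem1-16-dec-2024} the map $\Lambda_e$ is equivariant in the sense $\Lambda_e(p\star u)=\Lambda_e(p)\,\lambda(u)$, which, after differentiation at $p$, gives the commutation relation $\diff_{p\star u}\Lambda_e\circ(r_u)_*=(r_{\lambda(u)})_*\circ\diff_p\Lambda_e$. Hence, for $Y\in\mathscr{H}^{\mathrm{Spin}}_{e,p}$, one has
\[
\diff_{p\star u}\Lambda_e\!\bigl((r_u)_*Y\bigr)=(r_{\lambda(u)})_*\!\bigl(\diff_p\Lambda_e(Y)\bigr)\in(r_{\lambda(u)})_*\mathscr{H}^{\mathrm{SO}}_{\Lambda_e(p)}=\mathscr{H}^{\mathrm{SO}}_{\Lambda_e(p)\lambda(u)}=\mathscr{H}^{\mathrm{SO}}_{\Lambda_e(p\star u)},
\]
where we used the right-invariance of the Levi-Civita Ehresmann connection on $\mathrm{SO}^+(\Mcal)$ guaranteed by Proposition~\ref{thm2-08-jan-2025}. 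This shows $(r_u)_*Y\in\mathscr{H}^{\mathrm{Spin}}_{e,p\star u}$, and the reverse inclusion follows by replacing $u$ with $u^{-1}$, establishing $(r_u)_*\mathscr{H}^{\mathrm{Spin}}_{e,p}=\mathscr{H}^{\mathrm{Spin}}_{e,p\star u}$.

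The core of the argument is routine once $\Lambda_e$ is understood as a local diffeomorphism of principal bundles intertwining the right actions via the covering $\lambda$; the only mildly delicate point is the matching of vertical distributions, which is the place where one has to invoke the structure of $\Lambda_e$ established in Lemma~\ref{lem1-16-dec-2024} rather than just its being a local diffeomorphism. Everything else is formal transport by the local-diffeomorphism $\Lambda_e$.
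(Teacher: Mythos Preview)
Your proposal is correct and follows essentially the same approach as the paper: pull back smoothness and the horizontal/vertical splitting through the local diffeomorphism $\Lambda_e$, and deduce right-invariance from the equivariance relation $\Lambda_e\circ r_u=r_{\lambda(u)}\circ\Lambda_e$ of Lemma~\ref{lem1-16-dec-2024}. The only cosmetic difference is that the paper obtains the vertical correspondence directly from $\pi_{\mathrm{SO}}\circ\Lambda_e=\pi_{\mathrm{Spin}}$ (hence $\ker\diff\pi_{\mathrm{Spin}}=(\diff\Lambda_e)^{-1}\ker\diff\pi_{\mathrm{SO}}$), whereas you argue via ``fiber-preserving plus equal fiber dimension''; both yield the same conclusion.
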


\begin{proof}
Recall that $\Lambda_e$ is a local diffeomorphism. Thus $\mathscr{H}^{\mathrm{Spin}}_{e}$ is a smooth distribution since, locally, a smooth basis of $\mathrm{SO}^+(\mathcal{M})$ can be pulled back on $\mathscr{H}^{\mathrm{Spin}}_e$ and still be smooth. On the other hand, we observe that
$
\pi_{\mathrm{SO}} \circ \Lambda_e = \pi_{\mathrm{Spin}}
$
and this leads us to
$
\diff\pi_{\mathrm{SO}} \circ \diff\Lambda_e = \diff\pi_{\mathrm{Spin}}.
$
Recall that $\diff_p\Lambda_e$ is a linear isomorphism. Thus we find 
$$
\ker(\diff_p\pi_\mathrm{Spin}) = \Lambda^*(\ker(\diff_{\Lambda_e(p)} \pi_{\mathrm{SO}})), 
$$
which yields 
$
\mathscr{V}^{\mathrm{Spin}}_p = \Lambda^*\big(\mathscr{V}^{\mathrm{SO}}_{\Lambda_e(p)} \big).
$
Thus we obtain
\begin{equation}
\mathscr{H}^{\mathrm{Spin}}_{e,p} \oplus \mathscr{V}^{\mathrm{Spin}}_{e,p} = T_p\mathrm{Spin}^+(\mathcal{M}).
\end{equation}

We then check that $\mathscr{H}^{\mathrm{Spin}}_e$ is invariant under the right action of $\mathrm{Spin}_{3,1}^+$. In fact, we only need to show
\begin{equation} \label{eq1-01-jan-2025}
(r_g)_*(\mathscr{H}^{\mathrm{Spin}}_{e,p}) \subset \mathscr{H}^{\mathrm{Spin}}_{e,p\cdot g}
\end{equation}
due to the symmetry. To see this, let $Y\in \mathscr{H}^{\mathrm{Spin}}_{e,p}$. We need to show that $\diff_pr_g(Y)\in  \mathscr{H}^{\mathrm{Spin}}_{e,p\cdot g}$. We observe that Lemma~\ref{lem1-16-dec-2024} gives
$$
\Lambda_e\circ r_g = r_{\lambda(g)}\circ\Lambda_e.
$$
Then
$$
\diff_{p\cdot g} \Lambda_e\circ \diff_pr_g = \diff_{\Lambda_e(p)}r_{\lambda(g)} \circ \diff_p\Lambda_e
$$
by taking the differentials at $p$. 
Thus, remark that $\diff_p\Lambda_e(Y) \in\mathscr{H}^{\mathrm{SO}}_{\Lambda_e(p)}$, we obtain
$$
\diff_{p\cdot g} \Lambda_e\circ\diff_pr_g(Y) = \diff_{\Lambda_e(p)}r_{\lambda(g)} \circ \diff_p\Lambda_e(Y)\in (r_{\lambda(g)})_*\big(\mathscr{H}^{\mathrm{SO}}_{\Lambda_e(p)} \big) 
= \mathscr{H}^{\mathrm{SO}}_{\Lambda_e(p)\cdot\lambda(g)} = \mathscr{H}^{\mathrm{SO}}_{\Lambda_e(p\cdot g)}, 
$$
where the last two identity is due to the fact that $\mathscr{H}^{\mathrm{SO}}$ is right-invariant and Lemma~\ref{lem1-16-dec-2024}. Thus we find
$
\diff_pr_g(Y)\in \mathscr{H}^{\mathrm{Spin}}_{e,p\cdot g}$, which leads us to \eqref{eq1-01-jan-2025}. 
\end{proof}

%------------------------------------------

\paragraph{Spin connection for arbitrary frames.}

Finally, we will show that the above construction of right-invariant horizontal distribution on $\mathrm{Spin}^+(\mathcal{M})$ does not depend on the choice of   frame.  

\begin{proposition}
If $\{e_0,e_1,e_2,e_3\}$ and $\{f_0,f_1,f_2,f_3\}$ are two orthonormal frames defined on $\mathcal{M}$, the corresponding right-invariant horizontal distributions coincide:
\begin{equation} \label{eq2-01-jan-2025}
\mathscr{H}^{\mathrm{Spin}}_e = \mathscr{H}^{\mathrm{Spin}}_f.
\end{equation}
\end{proposition}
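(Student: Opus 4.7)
The plan is to reduce the equality \eqref{eq2-01-jan-2025} to the fundamental identity $\Lambda_f = \Lambda_e \circ F_{f \to e}$ established in Proposition~\ref{prop1-19-dec-2024}, combined with the observation that the bundle isomorphism $F_{f\to e}$ intertwines the two horizontal distributions via an elementary chain-rule argument.

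First, I would fix an arbitrary point $p \in \mathrm{Spin}^+(\mathcal{M})$ and differentiate the identity $\Lambda_f = \Lambda_e \circ F_{f\to e}$ at $p$ to obtain
$$
d_p \Lambda_f \;=\; d_{F_{f\to e}(p)} \Lambda_e \,\circ\, d_p F_{f\to e}.
$$
Combined with $\Lambda_f(p) = \Lambda_e(F_{f\to e}(p))$, this allows one to reformulate the defining condition for $\mathscr{H}^{\mathrm{Spin}}_{f,p}$: a tangent vector $Y \in T_p \mathrm{Spin}^+(\mathcal{M})$ satisfies $d_p\Lambda_f(Y) \in \mathscr{H}^{\mathrm{SO}}_{\Lambda_f(p)}$ if and only if $d_p F_{f\to e}(Y) \in \mathscr{H}^{\mathrm{Spin}}_{e,\, F_{f\to e}(p)}$. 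This yields the key intertwining identity $(F_{f\to e})_* \mathscr{H}^{\mathrm{Spin}}_f = \mathscr{H}^{\mathrm{Spin}}_e$ as distributions on $\mathrm{Spin}^+(\mathcal{M})$; the reverse direction, and hence the intertwining at the level of sub-bundles, follows by applying the same argument to $F_{e\to f} = F_{f\to e}^{-1}$.

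Next, I would invoke Lemma~\ref{lem1-05-jan-2025} and Proposition~\ref{prop1-19-dec-2024} to note that $F_{f\to e}$ covers the identity on $\mathcal{M}$ and is equivariant with respect to the two right actions $\star_e$ and $\star_f$ of $\mathrm{Spin}^+_{3,1}$. Since both bundles $\mathrm{Spin}^+_e(\mathcal{M})$ and $\mathrm{Spin}^+_f(\mathcal{M})$ share the same underlying manifold $\mathrm{Spin}^+(\mathcal{M})$, the map $F_{f\to e}$ acts as the canonical smooth identification implementing the change of tetrad, built from the Clifford right multiplication by $\gamma_e^{-1}(\tau_{e\to f})$ as in \eqref{eq1-17-dec-2024}. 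The equivariance of $F_{f\to e}$ together with the conjugation formula of Lemma~\ref{lem1-30-aout-2025} then show that the pushforward $(F_{f\to e})_*$ preserves the geometric content of the Levi-Civita lift, so that the distributions $\mathscr{H}^{\mathrm{Spin}}_e$ and $\mathscr{H}^{\mathrm{Spin}}_f$ coincide as subsets of the tangent bundle.

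The main technical obstacle will be the last identification step: passing from the intertwining property $(F_{f\to e})_* \mathscr{H}^{\mathrm{Spin}}_f = \mathscr{H}^{\mathrm{Spin}}_e$ to the literal equality \eqref{eq2-01-jan-2025} requires a careful tracking of how $F_{f\to e}$ acts pointwise on the tangent bundle of $\mathrm{Spin}^+(\mathcal{M})$. This amounts to verifying, in a local trivialization adapted to a common open set, that the two horizontal lifts defined via $\Lambda_e$ and $\Lambda_f$ produce the same subspace of each tangent space. Once this compatibility check is done using Claim~\ref{thm1-31-dec-2024} (which ensures $\lambda$, and hence both $\Lambda_e$ and $\Lambda_f$, are local diffeomorphisms), the equality \eqref{eq2-01-jan-2025} follows, establishing the gauge-invariance of the spin connection.
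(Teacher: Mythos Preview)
Your chain-rule reduction via $\Lambda_f = \Lambda_e \circ F_{f\to e}$ matches the paper's, and the intertwining $(F_{f\to e})_* \mathscr{H}^{\mathrm{Spin}}_{f,p} = \mathscr{H}^{\mathrm{Spin}}_{e,\, F_{f\to e}(p)}$ you derive is correct. But this relates horizontal spaces at \emph{different} points ($p$ versus $F_{f\to e}(p) = p\cdot g$), and your proposal never supplies the mechanism to close that gap: the vague appeal to ``local trivializations'' and ``compatibility checks'' in your final paragraph does not identify what actually forces the two subspaces of $T_p\mathrm{Spin}^+(\mathcal{M})$ to coincide.

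The paper's argument fills this in with one concrete ingredient you never invoke. Writing $F_{f\to e} = r_g$ with $g = \gamma_e^{-1}(\tau_{e\to f})$, the paper starts from $Y \in \mathscr{H}^{\mathrm{Spin}}_{e,p}$ and uses the \emph{right-invariance of $\mathscr{H}^{\mathrm{Spin}}_e$} --- established in the immediately preceding proposition --- to conclude $d_p r_g(Y) \in \mathscr{H}^{\mathrm{Spin}}_{e,\, p\cdot g}$. Feeding this into the chain-rule identity $d_p\Lambda_f = d_{p\cdot g}\Lambda_e \circ d_p r_g$ gives $d_p\Lambda_f(Y) \in \mathscr{H}^{\mathrm{SO}}_{\Lambda_e(p\cdot g)} = \mathscr{H}^{\mathrm{SO}}_{\Lambda_f(p)}$, hence $Y \in \mathscr{H}^{\mathrm{Spin}}_{f,p}$; symmetry then finishes. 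So the step you flag as a ``technical obstacle'' is not resolved by a trivialization computation at all, but by the single observation that $F_{f\to e}$ is a right translation in the $e$-principal-bundle structure, so the already-proved Ehresmann property of $\mathscr{H}^{\mathrm{Spin}}_e$ carries $Y$ across.
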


\begin{proof} We recall that Proposition~\ref{prop1-19-dec-2024} yields 
$
\Lambda_f = \Lambda_e\circ F_{f \to e} = \Lambda_e\circ r_g, 
$
in which, by \eqref{eq1-17-dec-2024},
$
g = \gamma_e^{-1}(\tau_{e \to f}).
$
This leads us to
\begin{equation} \label{eq3-01-jan-2025}
\diff_p\Lambda_f = \diff_{p\cdot g} \Lambda_e\circ \diff_pr_g.
\end{equation}
Then we return to \eqref{eq2-01-jan-2025}. By symmetry, we only need to show 
\begin{equation} \label{eq4-01-jan-2025}
\mathscr{H}^{\mathrm{Spin}}_e \subset \mathscr{H}^{\mathrm{Spin}}_f.
\end{equation}
Now let $Y\in \mathscr{H}^{\mathrm{Spin}}_{e,p}$, thus by the right-invariance of $\mathscr{H}^{\mathrm{Spin}}_{e,p}$, we have 
$
\diff_pr_g(Y) \in \mathscr{H}^{\mathrm{Spin}}_{e,p\cdot g}.
$
This gives us 
$
\diff_{p\cdot g} \Lambda_e\circ \diff_pr_g(Y) \in \mathscr{H}^{\mathrm{SO}}_{\Lambda_e(p\cdot g)}
$
and, by \eqref{eq3-01-jan-2025}, 
$$
\diff_p\Lambda_f(Y) \in \mathscr{H}^{\mathrm{SO}}_{\Lambda_e(p\cdot g)} = \mathscr{H}^{\mathrm{SO}}_{\Lambda_f(p)} \quad \Rightarrow \quad Y\in \mathscr{H}^{\mathrm{Spin}}_{f,p}, 
$$
which gives \eqref{eq4-01-jan-2025}.
\end{proof}

We arrive at the following notion.  

\begin{definition} The  \emph{spin connection} is defined as the right-invariant horizontal distribution
\begin{equation}
\mathscr{H}^{\mathrm{Spin}} := \mathscr{H}^{\mathrm{Spin}}_e, 
\end{equation}
where $e = \{e_0,e_1,e_2,e_3\}$ is \emph{any choice} of orthonormal frame with positive time-orientation and positive spatial orientation.
\end{definition}

This notion of a spin connection precisely ensures the desired local Lorentz (or spin) invariance under changes of frames; indeed, the spin connection describes how one can parallel transports spinors in a manner consistent with local $\mathrm{SO}_{3,1}^+$ (or $\mathrm{Spin}_{3,1}^+$) symmetry.  In other words, the notion of right-invariant horizontal distribution on a spin or frame bundle, and the related notion of covariant derivative, determines how spinors are parallel transported on the manifold. This connection encapsulates the interplay between the geometric structure of the manifold and the algebraic properties of spinors. 

%-----------------------------------------------------------------------

\paragraph{Connections on spinor bundles.}

We then define the connection on the spinor bundle $\ourS_e(\mathcal{M})$, which follows from Proposition~\ref{prop1-01-jan-2025}. With the notation
$$
P_e: \mathrm{Spin}^+_e(\mathcal{M})\times \mathbb{C}^4 \mapsto \ourS_e(\mathcal{M}), \qquad (p,\psi)\mapsto [p,\psi], 
$$
the distribution 
$$
\mathscr{H}^e_{[p,\psi]} := \{\diff_{(p,\psi)}P_e(Y,0)| Y\in\mathscr{H}^{\mathrm{Spin}}_p\}
$$
is called as the  \emph{spinor connection} on $\ourS_e(\mathcal{M})$. At this stage of the discussion, it is important to point out that \emph{different orthonormal frames} on $\mathcal{M}$ lead to \emph{different spinor bundles} equipped with different connections (although they are generated from the same spin connection $\mathscr{H}^{\mathrm{Spin}}$). However, the canonical bundle isomorphism $F_{f \to e}$ (see Lemma~\ref{lem1-05-jan-2025}) enjoys the following property \eqref{eq2-08-jan-2025}, which shows that the \emph{canonical bundle isomorphisms between two spinor bundles} generated by two orthonormal bases are equivalent and furthermore, this isomorphism also preserves the spinor connections defined on the respective spinor bundles.

\begin{proposition}
\label{thm1-08-jan-2025}
With the notation above, if $\{e_0,e_1,e_2,e_3\}$ and $\{f_0,f_1,f_2,f_3\}$ are two orthonormal frames defined on $\mathcal{M}$, the corresponding right-invariant horizontal distributions are related as follows: 
\begin{equation} \label{eq2-08-jan-2025}
(F_{f \to e})_*(\mathscr{H}_{[p,\psi]}^f) = \mathscr{H}_{F_{f \to e}([p,\psi])}^e.
\end{equation}
\end{proposition}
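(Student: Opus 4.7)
\textbf{Proof plan for Proposition~\ref{thm1-08-jan-2025}.} My strategy is to transport the identity $\mathscr{H}^{\mathrm{Spin}}_e=\mathscr{H}^{\mathrm{Spin}}_f=\mathscr{H}^{\mathrm{Spin}}$ (already established in the immediately preceding proposition) up to the associated spinor bundles via the intertwining map $F_{f\to e}$. The key geometric fact is that $F_{f\to e}$ on the principal level satisfies $\Lambda_e\circ F_{f\to e}=\Lambda_f$ (Proposition~\ref{prop1-19-dec-2024}), while on the associated bundle it acts fibrewise as $[p,\psi]\mapsto[F_{f\to e}(p),\psi]$, so that the canonical quotient maps $P_e,P_f$ intertwine $F_{f\to e}$ with the identity on $\mathbb{C}^4$, namely $F_{f\to e}\circ P_f=P_e\circ(F_{f\to e}\times\mathrm{id}_{\mathbb{C}^4})$. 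These two identities will allow me to lift horizontal vectors on $\mathrm{Spin}^+_f(\mathcal M)$ through $F_{f\to e}$ and then project them down to $\ourS_e(\mathcal{M})$ while remaining horizontal.

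The first step is to verify the principal-bundle statement
\[
(F_{f\to e})_*\bigl(\mathscr{H}^{\mathrm{Spin}}_{f,p}\bigr)=\mathscr{H}^{\mathrm{Spin}}_{e,F_{f\to e}(p)}.
\]
Differentiating $\Lambda_e\circ F_{f\to e}=\Lambda_f$ yields $\diff_{F_{f\to e}(p)}\Lambda_e\circ\diff_p F_{f\to e}=\diff_p\Lambda_f$. If $Y\in\mathscr{H}^{\mathrm{Spin}}_{f,p}$, the defining property of $\mathscr{H}^{\mathrm{Spin}}_f$ gives $\diff_p\Lambda_f(Y)\in\mathscr{H}^{\mathrm{SO}}_{\Lambda_f(p)}=\mathscr{H}^{\mathrm{SO}}_{\Lambda_e(F_{f\to e}(p))}$, which is exactly the defining condition for $\diff_pF_{f\to e}(Y)$ to lie in $\mathscr{H}^{\mathrm{Spin}}_{e,F_{f\to e}(p)}$. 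The reverse inclusion follows from the fact that both horizontal subspaces are complementary to the common vertical subspace and have the same dimension, together with the fact that $F_{f\to e}$ is a bundle diffeomorphism.

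The second step then lifts this to $\ourS_e,\ourS_f$. Taking differentials of the intertwining identity $F_{f\to e}\circ P_f=P_e\circ(F_{f\to e}\times\mathrm{id})$, I obtain, for any $Y\in\mathscr{H}^{\mathrm{Spin}}_p$,
\[
\diff_{[p,\psi]}F_{f\to e}\bigl(\diff_{(p,\psi)}P_f(Y,0)\bigr)
=\diff_{(F_{f\to e}(p),\psi)}P_e\bigl(\diff_pF_{f\to e}(Y),0\bigr).
\]
By the first step, $\diff_pF_{f\to e}(Y)\in\mathscr{H}^{\mathrm{Spin}}_{e,F_{f\to e}(p)}$, so the right-hand side lies in $\mathscr{H}^e_{[F_{f\to e}(p),\psi]}=\mathscr{H}^e_{F_{f\to e}([p,\psi])}$. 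This gives the inclusion $(F_{f\to e})_*(\mathscr{H}^f_{[p,\psi]})\subset\mathscr{H}^e_{F_{f\to e}([p,\psi])}$, and equality again follows by dimension count and the fact that $F_{f\to e}$ is a bundle isomorphism mapping the vertical distribution of $\ourS_f$ to that of $\ourS_e$.

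The only nontrivial ingredient in this plan is the first step, where one must show that $F_{f\to e}$ preserves the horizontal distribution at the principal-bundle level. This is not automatic from right-invariance alone, since $\tau_{e\to f}$ varies smoothly over $\mathcal{M}$ rather than being a single structure-group element; the identity $\Lambda_e\circ F_{f\to e}=\Lambda_f$ is what circumvents this difficulty, reducing the question to the coincidence $\mathscr{H}^{\mathrm{Spin}}_e=\mathscr{H}^{\mathrm{Spin}}_f$ proved earlier. Once this compatibility is in place, the passage to associated bundles is formal and the proof is complete.
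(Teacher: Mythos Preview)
Your proof is correct and, in fact, cleaner than the paper's. The paper uses the alternative representation $F_{f\to e}([p,\psi])=[p,\tau_{e\to f}^{-1}\psi]$ from \eqref{eq5-01-jan-2025}: it writes $F_{f\to e}\circ P_f(\gamma(t),\psi)=P_e(\gamma(t),\tau_{e\to f}^{-1}\psi)$ for a horizontal curve $\gamma$ and then differentiates, claiming the result is $\diff_{(p,\tau_{e\to f}^{-1}\psi)}P_e(Y,0)\in\mathscr{H}^e$. But $\tau_{e\to f}$ is evaluated at $\pi(\gamma(t))$, which varies along any genuinely horizontal curve, so that differentiation as written suppresses a nonzero vertical contribution $\diff P_e\bigl(0,\tfrac{d}{dt}\big|_{t=0}\tau_{e\to f}^{-1}(\pi(\gamma(t)))\psi\bigr)$. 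Your route sidesteps this entirely: by using the representation $[F_{f\to e}(p),\psi]$ with a \emph{constant} second component and the intertwining $F_{f\to e}\circ P_f=P_e\circ(F_{f\to e}\times\mathrm{id})$, the chain rule is unambiguous and the only substantive input is your Step~1, which you establish correctly from $\Lambda_e\circ F_{f\to e}=\Lambda_f$ together with the defining pullback $\mathscr{H}^{\mathrm{Spin}}_{e,q}=(\diff_q\Lambda_e)^{-1}\mathscr{H}^{\mathrm{SO}}_{\Lambda_e(q)}$. Your final paragraph also correctly pinpoints why right-invariance alone does not give Step~1 (since $\tau_{e\to f}$ is a varying gauge transformation rather than a fixed structure-group element).
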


\begin{proof} 
\bse
Thanks to \eqref{eq5-01-jan-2025}, we have $F_{f \to e} \circ P_f(p,\psi) = [F_{f \to e}(p),\psi] = [p,\tau_{e \to f}^{-1}\psi]$. 
Let $\gamma:(- \vep,\vep)\mapsto \mathrm{Spin}^+_f(\mathcal{M})$ with $\gamma(0) = p$ and $\dot\gamma(0) = Y\in\mathscr{H}^{\mathrm{Spin}}_p$. Then we have 
$$
F_{f \to e} \circ P_f(\gamma(t),\psi) = [\gamma(t),\tau_{e \to f}^{-1} \psi] = P_e(\gamma(t),\tau_{e \to f}^{-1}\psi).
$$
Differentiate the above identity with respect to $t$ at $t=0$ (regarded as curves in $\ourS_e(\mathcal{M})$), we obtain
\be
\diff_{[p,\psi]}F_{f \to e} \circ \diff_{(p,\psi)}P_f(Y,0) = \diff_{(p,\tau_{e \to f}^{-1}\psi)}P_e(Y,0).
\ee
Recall that 
\be
P_e(p,\tau_{e \to f}^{-1}\psi) = [p,\tau_{e \to f}^{-1}\psi] = [F_{f \to e}(p), \psi] = F_{f \to e}([p,\psi]),
\ee
then $\diff_{(p,\tau_{e \to f}^{-1}\psi)}P_e(Y,0)\in \mathscr{H}_{F_{f \to e}([p,\psi])}^e$ and this leads us to
\be
(F_{f \to e})_*(\mathscr{H}_{[p,\psi]}^f) \subset \mathscr{H}_{F_{f \to e}([p,\psi])}^e.
\ee
The reverse inclusion also holds by symmetry, an we reach \eqref{eq2-08-jan-2025}.
\ese
\end{proof}

}

%-------------------------------------------------------------------------------------------------------------------------------------------------

\subsection{ Spinorial covariant derivative}
\label{subsec2-09-jan-2025}

{

\paragraph{Covariant derivative on vector bundles.}

Let $\pi_\mathcal{E}: \mathcal{E} \mapsto\mathcal{M}$ be a real vector bundle. Let $\mathscr{H}^\mathcal{E}$ be a horizontal distribution. Let 
$
E: U\mapsto \mathcal{P}_U
$
be a local section of $\mathcal{E}$. Then the covariant derivative of $E$ is defined as follows. Let $x\in U$ and $X\in T_x\mathcal{M}$. Let 
\be
\gamma:(- \vep,\vep)\mapsto \mathcal{M}, \qquad \gamma(0) = x, \qquad \dot\gamma(0) = X
\ee
be a curve on $\mathcal{M}$. Then
\begin{equation} \label{eq3-08-jan-2025}
\nabla_XE\big|_x := \Big(\frac{\mathrm{d}}{\mathrm{d}t} \Big|_{t=0}E\circ\gamma(t)\Big)^{\text{v}},
\end{equation}
in which 
$
(\cdot)^\text{v}: T_{E(x)} \mathcal{E} \mapsto T_{E(x)} \mathcal{E}_x
$
is the vertical projection with respect to the direct sum $T_{E(x)} \mathcal{E} = \mathscr{H}^{\mathcal{E}}_{E(x)} \oplus \mathscr{V}^{\mathcal{E}}_{E(x)}$. It is immediate to check the following expected properties:
\begin{subequations} \label{eqs1-27-feb-2025}
\begin{equation}
\nabla_{(\alpha X+\beta Y)}E = \alpha \nabla_XE + \beta\nabla_Y E, \qquad \alpha,\beta\in \RR, \qquad X,Y\in \Gamma(T\mathcal{M}),
\end{equation}
\begin{equation}
\nabla_X(fE) = \mathrm{d}f(X)E + f\nabla_XE, \qquad f\in C^{\infty}(\mathcal{M}).
\end{equation}
\end{subequations}
Observe that $E_x$ is a real vector field. We can identify $T_{E(x)} \mathcal{E}_x$ with $\mathcal{E}_x$, therefore 
$
\nabla_XE\big|_x\in \mathcal{E}_x.
$

%-----------------------------------------------------------------------

\paragraph{Covariant derivatives on spinor bundles.}

Let us return to the spinor bundle. The following result is an alternative version of Proposition~\ref{thm1-08-jan-2025}. The proof is given Section~\ref{section-JD44}.

\begin{proposition}
\label{proposition-JD44} 
Let $\Phi: \mathcal{M} \mapsto S_f(\mathcal{M})$ be a section, and $\nabla^e$ and $\nabla^f$ are the covariant derivatives determined by the connection $\mathscr{H}^e$ and $\mathscr{H}^f$. Then for a (local) section $\Phi$ of $\mathrm{Spin}^+(\Mcal)$, one has 
\begin{equation} \label{eq4-08-jan-2025}
F_{f \to e} \big(\nabla_X^f\Phi\big) = \nabla^e_X\big(F_{f \to e} \circ \Phi\big). 
\end{equation}
\end{proposition}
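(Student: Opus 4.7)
The strategy is to unwind the definition of the covariant derivative from \eqref{eq3-08-jan-2025} and show that the bundle isomorphism $F_{f\to e}$ commutes with the vertical projection, then read off the conclusion. The crucial structural input is already available in Proposition~\ref{thm1-08-jan-2025}, which says that $F_{f\to e}$ sends the horizontal distribution $\mathscr{H}^f$ of $\ourS_f(\mathcal{M})$ to the horizontal distribution $\mathscr{H}^e$ of $\ourS_e(\mathcal{M})$; and in Lemma~\ref{lem1-05-jan-2025}, which says that $F_{f\to e}$ is a fiber-preserving vector bundle isomorphism.

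First, I would pick a smooth curve $\gamma: (-\varepsilon,\varepsilon) \to \mathcal{M}$ with $\gamma(0) = x = \pi_{\ourS_f}\circ\Phi(x)$ and $\dot\gamma(0) = X$, so that by definition \eqref{eq3-08-jan-2025}
$$
\nabla^f_X\Phi\big|_x = \Bigl(\tfrac{\mathrm{d}}{\mathrm{d}t}\big|_{t=0}\Phi\circ\gamma(t)\Bigr)^{\mathrm{v},f},
$$
where $(\cdot)^{\mathrm{v},f}$ is the projection onto $\mathscr{V}^f$ parallel to $\mathscr{H}^f$ in the splitting $T\ourS_f(\mathcal{M}) = \mathscr{H}^f \oplus \mathscr{V}^f$; similarly $(\cdot)^{\mathrm{v},e}$ is defined on $T\ourS_e(\mathcal{M})$. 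I then apply $\mathrm{d}F_{f\to e}$ to both sides and observe that, since $F_{f\to e}$ is fiber-preserving with $\pi_{\ourS_e}\circ F_{f\to e} = \pi_{\ourS_f}$, its differential sends $\mathscr{V}^f = \ker\,\mathrm{d}\pi_{\ourS_f}$ into $\mathscr{V}^e = \ker\,\mathrm{d}\pi_{\ourS_e}$; combined with Proposition~\ref{thm1-08-jan-2025}, which gives $(F_{f\to e})_*\mathscr{H}^f = \mathscr{H}^e$, this shows that $\mathrm{d}F_{f\to e}$ respects the direct-sum decomposition, hence intertwines the two vertical projections:
$$
\mathrm{d}F_{f\to e}\circ (\cdot)^{\mathrm{v},f} = (\cdot)^{\mathrm{v},e}\circ \mathrm{d}F_{f\to e}.
$$

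Finally, applying the chain rule to the smooth curve $t\mapsto F_{f\to e}\circ \Phi\circ\gamma(t)$ in $\ourS_e(\mathcal{M})$ gives
$$
\mathrm{d}F_{f\to e}\Bigl(\tfrac{\mathrm{d}}{\mathrm{d}t}\big|_{t=0}\Phi\circ\gamma(t)\Bigr) = \tfrac{\mathrm{d}}{\mathrm{d}t}\big|_{t=0}(F_{f\to e}\circ\Phi)\circ\gamma(t),
$$
so that, after applying the vertical projection on the $\ourS_e$ side,
$$
\mathrm{d}F_{f\to e}\bigl(\nabla^f_X\Phi\bigr) = \Bigl(\tfrac{\mathrm{d}}{\mathrm{d}t}\big|_{t=0}(F_{f\to e}\circ\Phi)\circ\gamma(t)\Bigr)^{\mathrm{v},e} = \nabla^e_X\bigl(F_{f\to e}\circ\Phi\bigr).
$$
To obtain the identity \eqref{eq4-08-jan-2025} exactly as stated, I would use the canonical identification of the vertical tangent space $T_{v}\mathcal{E}_x$ of a vector bundle with the fiber $\mathcal{E}_x$ itself to replace $\mathrm{d}F_{f\to e}$ by $F_{f\to e}$ on vertical vectors, which is legitimate because $F_{f\to e}$ restricts to a linear map on fibers.

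The main (modest) obstacle is bookkeeping rather than substance: one must justify the identification of $\mathrm{d}F_{f\to e}$ restricted to $\mathscr{V}^f$ with the linear restriction of $F_{f\to e}$ to each fiber. This follows because $F_{f\to e}$ is linear on fibers (Lemma~\ref{lem1-05-jan-2025}), and the canonical isomorphism $T_v\mathcal{E}_x \cong \mathcal{E}_x$ intertwines the differential of a fiberwise linear map with that linear map itself. Once this standard fact is in place, the proof reduces to the commutativity of $F_{f\to e}$ with the vertical projection established via Proposition~\ref{thm1-08-jan-2025}, and no further calculation is needed.
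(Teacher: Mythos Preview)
Your proposal is correct and follows essentially the same approach as the paper's proof: both use the definition of $\nabla$ as vertical projection from \eqref{eq3-08-jan-2025}, invoke Proposition~\ref{thm1-08-jan-2025} to carry $\mathscr{H}^f$ to $\mathscr{H}^e$ and fiberwise linearity to carry $\mathscr{V}^f$ to $\mathscr{V}^e$, and then conclude by the chain rule together with the canonical identification of vertical tangent spaces with fibers. Your explicit formulation of the intertwining relation $\mathrm{d}F_{f\to e}\circ(\cdot)^{\mathrm{v},f}=(\cdot)^{\mathrm{v},e}\circ\mathrm{d}F_{f\to e}$ is a clean way to package what the paper does in two separate steps, but the substance is identical.
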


%-----------------------------------------------------------------------

\paragraph{Dirac operator.}

Let $\Phi$ be a section of $\mathrm{Spin}^+_e(\mathcal{M})$ with $e = (e_0,e_1,e_2,e_3)$ being a global orthonormal frame. Then the Dirac operator \emph{with respect to the frame $e$} reads 
\begin{equation} 
\opDirac_e \Phi:= \eta^{ij}e_i\nabla^e_{e_j} \Phi.
\end{equation}
The following observation is essential. It shows that the Dirac operator can be defined on the classes of spinor fields which are defined in different spinor bundles by different orthonormal frames. In the following analysis, we do not distinguish between a spinor field and its class.  Henceforth, any two spinor fields differing by a choice of global frame are identified via the bundle isomorphisms.  In particular, the Dirac operator is understood as an operator be defined with respect to \emph{any given orthonormal frame.} 

\begin{proposition}
If  $e = (e_0,e_1,e_2,e_3), f = (f_0,f_1,f_2,f_3)$ are two orthonormal frames with transition matrix given by 
$
(f_0,f_1,f_2,f_3) = (e_0,e_1,e_2,e_3)T_{e \to f}, 
$
one has
\begin{equation}
F_{f \to e} \big( \opDirac_f \Phi\big) = \opDirac_e   (F_{f \to e} \circ\Phi).
\end{equation}
\end{proposition}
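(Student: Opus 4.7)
The strategy is to push $F_{f\to e}$ through the three operations that build the Dirac operator (scalar contraction, Clifford multiplication, spinorial covariant derivative), and then use the orthogonality of the transition matrix $T_{e\to f}\in\mathrm{SO}^+_{3,1}$ to rewrite the resulting expression back in terms of the frame $e$.

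First, I would start from the definition
\[
\opDirac_f \Phi = \eta^{ij}\, f_i \cdot \nabla^f_{f_j}\Phi
\]
and apply $F_{f\to e}$ to both sides, slot by slot. Since $F_{f\to e}$ is fiberwise $\RR$-linear by Lemma~\ref{lem1-05-jan-2025}, it commutes with the scalar coefficients $\eta^{ij}$. By \eqref{eq5-08-jan-2025}, $F_{f\to e}$ commutes with Clifford multiplication by any tangent vector, in particular by $f_i$. Finally, by Proposition~\ref{proposition-JD44}, $F_{f\to e}$ intertwines the two covariant derivatives:
\[
F_{f\to e}\bigl(\nabla^f_{f_j}\Phi\bigr) = \nabla^e_{f_j}\bigl(F_{f\to e}\circ\Phi\bigr).
\]
Combining these three facts, I obtain
\[
F_{f\to e}\bigl(\opDirac_f\Phi\bigr) = \eta^{ij}\, f_i \cdot \nabla^e_{f_j}\bigl(F_{f\to e}\circ\Phi\bigr).
\]

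Next, I would write the transition $(f_0,f_1,f_2,f_3) = (e_0,e_1,e_2,e_3)\,T_{e\to f}$ componentwise as $f_i = (T_{e\to f})^k{}_i\, e_k$. Using the $\RR$-linearity of Clifford multiplication in the vector slot (which follows from \eqref{equa-map} and the bilinearity of the Clifford product) together with the $C^\infty(\Mcal)$-linearity of $\nabla^e$ in the direction argument (first identity in \eqref{eqs1-27-feb-2025}), this yields
\[
\eta^{ij}\, f_i \cdot \nabla^e_{f_j}(F_{f\to e}\circ\Phi) = \eta^{ij}(T_{e\to f})^k{}_i (T_{e\to f})^l{}_j\, e_k \cdot \nabla^e_{e_l}(F_{f\to e}\circ\Phi).
\]
Since $T_{e\to f}$ belongs to $\mathrm{SO}^+_{3,1}$, it preserves the Minkowski metric, hence $\eta^{ij}(T_{e\to f})^k{}_i (T_{e\to f})^l{}_j = \eta^{kl}$. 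Substituting, the right-hand side collapses to $\eta^{kl}\, e_k \cdot \nabla^e_{e_l}(F_{f\to e}\circ\Phi) = \opDirac_e(F_{f\to e}\circ\Phi)$, which is exactly what we wanted.

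There is no serious obstacle: the proof reduces to verifying that each of the three building blocks of $\opDirac$ (metric contraction, Clifford product, spin covariant derivative) is compatible with the canonical bundle isomorphism $F_{f\to e}$, which is precisely the content of Lemma~\ref{lem1-05-jan-2025}, identity~\eqref{eq5-08-jan-2025}, and Proposition~\ref{proposition-JD44}. The only point worth writing carefully is the orthogonality identity $\eta^{ij}T^k{}_i T^l{}_j = \eta^{kl}$, which ensures that the apparent dependence on the frame (through the contraction $\eta^{ij}f_i\otimes f_j$) is in fact frame-independent; this is the gauge-invariance of the Dirac operator at the algebraic level, and it is what makes the construction consistent with the equivalence of spin bundles established in Proposition~\ref{prop1-19-dec-2024}.
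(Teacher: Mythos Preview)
Your proof is correct and follows essentially the same approach as the paper: use linearity, \eqref{eq5-08-jan-2025}, and Proposition~\ref{proposition-JD44} to reduce $F_{f\to e}(\opDirac_f\Phi)$ to $\eta^{ij}f_i\cdot\nabla^e_{f_j}(F_{f\to e}\circ\Phi)$, then identify this with $\opDirac_e(F_{f\to e}\circ\Phi)$. The paper's proof is a one-line computation that leaves the last identification implicit (relying on the frame-independence of $\eta^{ij}X_i\otimes X_j=g^{-1}$), whereas you spell it out via the orthogonality identity $\eta^{ij}(T_{e\to f})^k{}_i(T_{e\to f})^l{}_j=\eta^{kl}$; both are the same argument.
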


\begin{proof} We recall \eqref{eq5-08-jan-2025} together with \eqref{eq4-08-jan-2025}, and compute 
$$
F_{f \to e} \big( \opDirac_f \Phi\big) = F_{f \to e} \big(\eta^{ij}f_i\cdot \nabla^f_{f_j} \Phi\big) 
= \eta^{ij}F_{f \to e}(f_i\cdot \nabla^f_{f_j} \Phi) = \eta^{ij}f_i\cdot \nabla^e_{f_j}(F_{f \to e} \circ\Phi)
= \opDirac_e (F_{f \to e} \circ\Phi). \qedhere
$$
\end{proof} 

}

%------------------------------------------------------------------------------------------------------------------------------------------------

\subsection{ Dirac sesquilinear forms}
\label{section===33}
{ 
 We generalize the notion of Dirac forms introduced in \eqref{eq1-12-jan-2025} to the setting of spinor bundles. Let $e = (e_0,e_1,e_2,e_3)$ be a global orthonormal frame, and let $\Phi,\Psi$ be two sections on $\mathrm{Spin}^+_e(\mathcal{M})$. We have the coordinate map 
$
\gamma_e: \mathrm{Spin}^+(T_x\mathcal{M})\to \mathrm{Spin}_{3,1}^+.
$
Similarly to the spinorial covariant derivative, we have the following property. Based on this observation,  for $\Phi,\Psi$ as representative elements of isomorphic sections, we can define the sesquilinear form 
\begin{equation}
\langle \Phi,\Psi \rangle_{\ourD} := \langle \gamma_0 \gamma_e(\Phi),\gamma_e(\Psi)\rangle.
\end{equation}

\begin{proposition}
If $e = (e_0,e_1,e_2,e_3)$ and $f = (f_0,f_1,f_2,f_3)$ are two orthonormal frames with transition matrix given by
$(f_0,f_1,f_2,f_3) = (e_0,e_1,e_2,e_3)T_{e \to f}$, then one has 
\begin{equation}
\langle \gamma_0 \gamma_e(F_{f \to e} \circ\Phi),\gamma_e(F_{f \to e} \circ\Psi) \rangle = \langle \gamma_0\gamma_f(\Phi),\gamma_f(\Psi)\rangle.
\end{equation}
\end{proposition}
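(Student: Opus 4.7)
The plan is to reduce the identity to the $\mathrm{Spin}_{3,1}^+$-invariance of the standard Dirac form stated in Claim~\ref{lem1-15-dec-2024}. The key observation is that the canonical bundle isomorphism $F_{f\to e}$ acts on coordinates by a pointwise multiplication by $\tau_{e\to f}^{-1}\in \mathrm{Spin}_{3,1}^+$, and the standard Dirac form $(\phi,\psi)\mapsto \langle\gamma_0\phi,\psi\rangle$ is invariant under such actions.

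First I would unfold the coordinate maps. Writing the sections locally as $\Phi = [p,\gamma_f(\Phi)]_f$ and $\Psi = [p,\gamma_f(\Psi)]_f$ in the global trivialization of $\mathrm{Spin}^+_f(\mathcal M)$ (with $p$ the identity section), formula~\eqref{eq5-01-jan-2025} immediately yields
\begin{equation*}
F_{f\to e}\circ \Phi = \bigl[p,\tau_{e\to f}^{-1}\gamma_f(\Phi)\bigr]_e,
\qquad
F_{f\to e}\circ \Psi = \bigl[p,\tau_{e\to f}^{-1}\gamma_f(\Psi)\bigr]_e,
\end{equation*}
so that $\gamma_e(F_{f\to e}\circ\Phi)=\tau_{e\to f}^{-1}\gamma_f(\Phi)$ and similarly for $\Psi$. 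This is the only place where the definition of $F_{f\to e}$ intervenes.

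Next, I would substitute these identities into the left-hand side and apply the invariance of the standard Hermitian pairing $\langle\gamma_0\,\cdot\,,\,\cdot\,\rangle$ under $\mathrm{Spin}_{3,1}^+$, which is exactly the content of Claim~\ref{lem1-15-dec-2024}. Since $\tau_{e\to f}^{-1}$ is, by construction~\eqref{eq1-18-dec-2024}, an element of $\mathrm{Spin}_{3,1}^+$, we obtain
\begin{equation*}
\bigl\langle \gamma_0\,\tau_{e\to f}^{-1}\gamma_f(\Phi),\,\tau_{e\to f}^{-1}\gamma_f(\Psi)\bigr\rangle
= \bigl\langle \gamma_0\gamma_f(\Phi),\,\gamma_f(\Psi)\bigr\rangle,
\end{equation*}
which is the claimed identity.

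The main obstacle is purely notational rather than conceptual: one must keep careful track of the fact that $\gamma_e$ and $\gamma_f$ are coordinate maps associated with two distinct trivializations of \emph{the same underlying spin bundle}, and that the equivalence relation defining $\ourS_e(\mathcal M)$ differs from the one defining $\ourS_f(\mathcal M)$. In particular, one should verify that the computation of $\gamma_e(F_{f\to e}\circ\Phi)$ is independent of the representative chosen; this follows from the equivariance built into \eqref{eq5-01-jan-2025} combined with the fact that $\tau_{e\to f}$ is, pointwise, a smooth lift of the transition matrix $T_{e\to f}$ to $\mathrm{Spin}_{3,1}^+$. Once this bookkeeping is in place, the argument collapses to a single line invoking Claim~\ref{lem1-15-dec-2024}.
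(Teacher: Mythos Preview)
Your proposal is correct and follows essentially the same approach as the paper: both compute the action of $F_{f\to e}$ on coordinates and then invoke the $\mathrm{Spin}_{3,1}^+$-invariance of $\langle\gamma_0\,\cdot\,,\,\cdot\,\rangle$. The only cosmetic difference is that the paper works at the principal-bundle level, writing $F_{f\to e}\circ\Phi = \Phi\cdot\gamma_e^{-1}(\tau_{e\to f})$ directly from~\eqref{eq1-17-dec-2024} to obtain $\gamma_e(F_{f\to e}\circ\Phi)=\gamma_e(\Phi)\tau_{e\to f}$, whereas you work at the associated-bundle level via~\eqref{eq5-01-jan-2025}; both routes collapse to the same invariance statement.
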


\begin{proof}
We only need to observe that $F_{f \to e} \circ\Phi = \Phi \cdot \gamma_e^{-1}(\tau_{e \to f})$, 
which implies 
$
\gamma_e(F_{f \to e} \circ\Phi) = \gamma_e(\Phi)\tau_{e \to f}.
$
Then, by the spin-invariance of $\langle \gamma_0\cdot,\cdot\rangle$, the desired result follows.
\end{proof}

Next, we define a generalization of Dirac forms on spinor bundles. Let $e = (e_0,e_1,e_2,e_3)$ be an orthogonal frame. Let $\Phi,\Psi$ be two spinor fields with $\Phi: \mathcal{M} \to \ourS_e(\mathcal{M})$ and $ \Psi: \mathcal{M} \to \ourS_e(\mathcal{M})$
such that $\Phi = [p,\phi]$, $\Psi = [p,\psi]$, where $p$ is a local section $U \subset \mathcal{M} \to \mathrm{Spin}_e^+(\mathcal{M})$. We then define 
\begin{equation}
\langle \Phi,\Psi \rangle_{\ourD e} :=\langle \gamma_0\phi,\psi\rangle.
\end{equation}
We must check that $\langle \cdot,\cdot\rangle_{\ourD e}$ is well-defined. In fact, let $q: U \subset \mathcal{M} \to\mathrm{Spin}_e^+(\mathcal{M})$ be another local section. Suppose that
$\Phi = [q,\widetilde{\phi}]$ and $\Psi = [q,\widetilde{\psi}]$. 
Since $\mathrm{Spin}_{3,1}^+$ acts freely and transitively on each fiber $\mathrm{Spin}(T_x\mathcal{M})^+$ for $x \in U$, there exists a unique $\tau_{qp}$ such that 
$$
p(x) = q(x) \star \tau_{qp}(x) = q(x) \cdot \gamma_e^{-1}(\tau_{qp}).
$$
Then we have $\widetilde{\phi} = \tau_{qp} \phi, \quad \widetilde{\psi} = \tau_{qp} \psi$, and we conclude that 
$
\langle \gamma_0\widetilde{\phi},\widetilde{\psi} \rangle = \langle \gamma_0\phi,\psi\rangle,
$
due to the invariance of $\langle\gamma_0\cdot,\cdot \rangle$ under the spin group action.

Furthermore, this Dirac form extends naturally to equivalence classes of spinor fields.  

\begin{proposition}
If $e = (e_0,e_1,e_2,e_3)$ and $f = (f_0,f_1,f_2,f_3)$ are two orthonormal frames, then one has 
\begin{equation}
\langle F_{f \to e}(\Phi),F_{f \to e}(\Psi)\rangle_{\ourD e} = \langle \Phi,\Psi\rangle_{\ourD f}.
\end{equation}
\end{proposition}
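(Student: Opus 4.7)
The plan is to reduce the identity to two ingredients already established: the explicit formula \eqref{eq5-01-jan-2025} for the bundle isomorphism $F_{f \to e}$ in terms of representatives, and the $\mathrm{Spin}_{3,1}^+$-invariance of the standard form $\langle \gamma_0 \cdot, \cdot\rangle$ on $\mathbb{C}^4$ (Claim~\ref{lem1-15-dec-2024}). Since each Dirac form $\langle \cdot,\cdot\rangle_{\ourD e}$ and $\langle \cdot,\cdot\rangle_{\ourD f}$ has already been shown to be well-defined independently of the representative chosen in the corresponding associated bundle, it suffices to compare both sides using one convenient choice of representatives.

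First I would fix a local section $p\colon U\subset\mathcal{M}\to \mathrm{Spin}^+(\mathcal{M})$, which can be viewed simultaneously as a section of $\mathrm{Spin}^+_e(\mathcal{M})$ and of $\mathrm{Spin}^+_f(\mathcal{M})$ (the underlying set being frame-independent, only the right $\mathrm{Spin}_{3,1}^+$-action differing through $\gamma_e$ versus $\gamma_f$). Then I write $\Phi = [p,\phi]_f$ and $\Psi = [p,\psi]_f$ with $\phi,\psi\in\mathbb{C}^4$, so that by definition
\begin{equation*}
\langle \Phi,\Psi\rangle_{\ourD f} = \langle \gamma_0\,\phi,\psi\rangle.
\end{equation*}
Next I apply the bundle isomorphism: by \eqref{eq5-01-jan-2025},
\begin{equation*}
F_{f \to e}(\Phi) = [p,\tau_{e \to f}^{-1}\phi]_e,
\qquad
F_{f \to e}(\Psi) = [p,\tau_{e \to f}^{-1}\psi]_e,
\end{equation*}
and therefore
\begin{equation*}
\langle F_{f \to e}(\Phi), F_{f \to e}(\Psi)\rangle_{\ourD e}
= \langle \gamma_0\,\tau_{e \to f}^{-1}\phi,\,\tau_{e \to f}^{-1}\psi\rangle.
\end{equation*}

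Since $\tau_{e \to f}^{-1}\in \mathrm{Spin}_{3,1}^+$ and the standard form $\langle \gamma_0\cdot,\cdot\rangle$ is invariant under the spin group action (Claim~\ref{lem1-15-dec-2024}), the right-hand side equals $\langle \gamma_0\,\phi,\psi\rangle = \langle \Phi,\Psi\rangle_{\ourD f}$, which closes the argument. The only minor point to keep in mind is that both Dirac forms have been shown earlier to be independent of the chosen local section in their respective spinor bundles, which justifies the use of the single representative $p$ on both sides. There is no substantial obstacle: this is essentially a one-line computation once the correct formula for $F_{f \to e}$ at the level of representatives is applied, and it illustrates precisely the gauge-invariance mechanism emphasized throughout this section.
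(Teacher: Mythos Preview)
Your proof is correct and follows essentially the same approach as the paper: pick representatives $[p,\phi]_f$, $[p,\psi]_f$, apply the formula $F_{f\to e}([p,\cdot]) = [p,\tau_{e\to f}^{-1}\cdot]$ from \eqref{eq5-01-jan-2025}, and invoke the $\mathrm{Spin}_{3,1}^+$-invariance of $\langle\gamma_0\cdot,\cdot\rangle$. Your additional remarks on well-definedness with respect to the choice of local section are accurate but not strictly needed, since this was already established before the proposition.
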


\begin{proof}
Let $T_{e \to f} \in\mathrm{SO}_{3,1}^+$ such that
$
(f_0,f_1,f_2,f_3) = (e_0,e_1,e_2,e_3)T_{e \to f}.
$
Let $p:\mathcal{M} \to \mathrm{Spin}_f^+(\mathcal{M})$ and define:
$$
\Phi = [p,\phi], \quad \Psi = [p,\psi].
$$
Then we have 
$$
F_{f \to e}(\Phi) = [p,\tau_{e \to f}^{-1} \phi], \quad F_{f \to e}(\Psi) = [p,\tau_{e \to f}^{-1} \psi], 
$$
therefore
$$
\aligned
\langle F_{f \to e}(\Phi),F_{f \to e}(\Psi)\rangle_{\ourD e} 
& = \langle \gamma_0\tau_{e \to f}^{-1} \phi,\tau_{e \to f}^{-1} \psi\rangle 
 = \langle \gamma_0\phi,\psi\rangle = \langle \Phi,\Psi\rangle_{\ourD_f},
\endaligned
$$
using again the invariance of $\langle \gamma_0\cdot,\cdot\rangle$ under the spin group.
\end{proof}
From now on, we do not specify the frame and write simply
\be
\langle \Phi,\Psi\rangle_{\ourD}
\ee
for equivalence classes of spinor fields.
}

%===========================================================================

\section{Additional material on the gauge-invariant framework}
\label{section=N19}

\subsection{ Principal bundle structure} 
\label{annex-principal-bundle}

{ 

\paragraph{Action of a Lie group on a manifold.}

In order to support the geometric context in which we formulate our results of the present Monograph in a gauge-invariant manner, we outline here the abstract framework that is relevant to the description of spinors in a curved spacetime. We follow the presentation and (mostly) the notation in Hamilton's textbook~\cite{Hamilton-2017}. Let \(\mathcal{M} \) be a smooth manifold and \(G\) be a Lie group. By definition, a  \emph{right action} of \(G\) on \(\mathcal{M} \) is a smooth map
$
\mathcal{M} \times G  \to \mathcal{M},
\quad (x,g)\,\mapsto\, x\cdot g
$
satisfying
\be
x\cdot (g\cdot h) = (x\cdot g)\cdot h,
\quad x\cdot e = x
\qquad \text{for all } x \in \mathcal{M} \text{ and } g,h \in G,
\ee
where \(e\) denotes the identity element of \(G\). Then, given a point \(x\in \mathcal{M} \), the  \emph{orbit map} $\phi_x : G  \to \mathcal{M}$, namely 
\be
\phi_x(g) = x \cdot g
\quad
\text{for each } g \in G,
\ee
induces a family of  \emph{fundamental vector fields} on \(\mathcal{M} \), defined by differentiation at the identity element as 
\[
\widetilde{X}_{\,x\cdot g} = \diff_e \phi_x(X), 
\quad X \in T_e G = \mathfrak{g}.
\]

\bse
Right-translation by an element $g\in G$ maps a fundamental vector field to another fundamental vector field. In particular, if $\widetilde{X}$ is the fundamental vector field associated with $X\in\mathfrak{g}$ (the tangent space at the identity), then the push forward of $\widetilde{X}$ reads
\be
(r_g)*(\widetilde{X}) = \widetilde{Y},
\ee
where the corresponding Lie algebra element Y is given by
\be
Y = \mathrm{Ad}_{g^{-1}}(X).
\ee
That is, the effect of right-translation is equivalent to conjugating $X$ by $g^{-1}$. (Here, $\mathrm{Ad}$ denotes the adjoint representation of the Lie group, defined by the key equation 
\be
\mathrm{Ad}_g(X) = g\,X\,g^{-1}. 
\ee
Its differential at the identity yields the Lie algebra adjoint, which in our notation is given by
\be
\mathrm{ad}_{g}(h) := g^{-1} \,h\,g,
\ee
so that
\be
\diff_e\!\bigl(\mathrm{Ad}_{g^{-1}} \bigr)(X)= \mathrm{ad}_{g^{-1}}(X).
\ee
\ese
%

%-----------------------------------------------------------------------

\paragraph{Basics facts about principal bundles.}

In short, a principal bundle is a manifold that locally looks like a product of a base manifold and a Lie group. For our purpose, we assume that the projection $\pi : \mathcal{P} \to \mathcal{M}$ is a surjective smooth map between smooth manifolds. For \(x\in \mathcal{M} \) and any open set \(U\subset \mathcal{M} \), we set
\[
\mathcal{P}_x = \pi^{-1}(\{x\}), 
\quad 
\mathcal{P}_U = \pi^{-1}(U).
\]
The set \(\mathcal{P}_x\) is called the  \emph{fiber} over \(x\).

\begin{definition}[Principal bundle as a locally trivial space]
Let \(\pi : \mathcal{P} \to \mathcal{M} \) be a smooth surjective map, and let \(G\) be a Lie group acting on \(\mathcal{P} \) from the right. Suppose that the action of \(G\) preserves each fiber:
\[
\mathcal{P}_x \times G  \to \mathcal{P}_x,
\]
and the map 
$
\phi_x : G  \to \mathcal{P}_x, 
\, g \,\mapsto\, x\cdot g 
$
is a bijection for all \(x \in \mathcal{M} \). Suppose also the following two properties. 

\bei 

\item For each \(x \in \mathcal{M} \), there is a local trivialization on an open neighborhood \(U\subset \mathcal{M} \), say 
\[
\phi_U : \mathcal{P}_U  \to U \times G,
\quad p \,\mapsto\, \phi_U(p)\,(\,=\,(\pi(p),\beta(p))\,),
\]
and \(\phi_U\) is a diffeomorphism and \(\mathrm{pr}_1 \circ \phi_U = \pi\) (where $\mathrm{pr}_1$ denotes the projection on the first component). 

\item The trivialization is  \emph{equivariant}, in the sense that 
$\phi_U(p\cdot g) = \phi_U(p)\cdot g$, 
where \(\phi_U(p) = (x,a)\) and
$
(x,h)\cdot g = (x,h\cdot g).
$
\end{itemize}
These properties define the \emph{principal bundle structure} and, by definition, in this context \(G\) is called the  \emph{structure group}.
\end{definition}

We will also use the following terminology. 
\bei

\item 
A  \emph{global section} of a principal bundle is a map \(s : \mathcal{M} \to \mathcal{P} \) such that
$
\pi \circ s = \mathrm{Id}_{\mathcal{M}}.
$

\item 
A  \emph{local section} is a map \(s\) defined on an open set \(U\subset \mathcal{M} \) with $\pi \circ s = \mathrm{Id}_U$. 
\eei 
\noindent 
A local section is also called a  \emph{local gauge}. Observe that specifying a local trivialization is equivalent to choosing a local section, by setting $s(x) = \phi_U^{-1} \bigl(x,e\bigr)$. Conversely, if for every \(x \in \mathcal{M} \) a local section is provided on some neighborhood of \(x\), then we can recover a principal bundle structure, as now stated. For further details, we refer to \cite[Sections 4.1 and 4.2]{Hamilton-2017}.

\begin{claim}
Let \(\pi : \mathcal{P} \to \mathcal{M} \) be a principal bundle with structure group \(G\). 
If \(\phi_U\) is a local trivialization defined on an open set \(U \subset \mathcal{M} \), then there exists a local section
$s : U  \to \mathcal{P}$. 
\end{claim}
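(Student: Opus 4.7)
The plan is to construct the local section explicitly from the trivialization by evaluating its inverse at the identity element of the structure group. Given the local trivialization $\phi_U : \mathcal{P}_U \to U \times G$, I would define
\[
s : U \to \mathcal{P}, \qquad s(x) := \phi_U^{-1}\bigl(x, e\bigr),
\]
where $e \in G$ denotes the identity. This candidate is natural because it picks out, in each fiber, the unique point corresponding to the identity element under the chosen trivialization.

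The verification proceeds in two short steps. First, smoothness of $s$ is immediate: $\phi_U$ is a diffeomorphism by the definition of a local trivialization, hence $\phi_U^{-1}$ is smooth, and the map $x \mapsto (x,e)$ from $U$ to $U \times G$ is clearly smooth, so $s$ is the composition of smooth maps. Second, to confirm that $\pi \circ s = \mathrm{Id}_U$, I would invoke the compatibility condition $\mathrm{pr}_1 \circ \phi_U = \pi$ satisfied by any local trivialization. Applying it at the point $s(x) = \phi_U^{-1}(x,e)$ yields
\[
\pi\bigl(s(x)\bigr) = \mathrm{pr}_1 \circ \phi_U\bigl(\phi_U^{-1}(x,e)\bigr) = \mathrm{pr}_1(x,e) = x,
\]
which is precisely the defining property of a local section on $U$.

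There is no substantive obstacle here: the claim is essentially a definitional unpacking, and the entire content lies in observing that a local trivialization automatically provides a distinguished point in each fiber (the preimage of the identity), which varies smoothly with the base point. The only mild subtlety worth flagging is that different choices of trivialization produce different local sections, related fiberwise by the right $G$-action; this will be relevant later when comparing spin structures induced by different orthonormal frames, but plays no role in establishing mere existence.
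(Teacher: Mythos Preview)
Your proof is correct and uses exactly the construction the paper indicates: the text immediately preceding the claim states ``specifying a local trivialization is equivalent to choosing a local section, by setting $s(x) = \phi_U^{-1}(x,e)$,'' which is precisely your definition. Your verification is more detailed than what the paper provides, but the approach is identical.
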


\begin{claim}[Principal bundle defined by local gauge]\label{prop2-18-oct-2024}
Let \(\pi : \mathcal{P} \to \mathcal{M} \) be a surjective map between smooth manifolds endowed with a smooth action $\Phi : \mathcal{P} \times G  \to \mathcal{P}$. Suppose that the following two properties hold. 
\begin{itemize}
\item The action \(\Phi\) preserves the fibers of \(\pi\) and is free and transitive on each fiber:
\[
\aligned
& \mathcal{P}_x \times G  \to \mathcal{P}_x, \qquad \qquad 
\quad G_x = \{e\} \ (\text{free}), 
\quad 
\\
& \text { for all } \, p,q \in \mathcal{P}_x,\ \text{ there exists } g \in G \text{ such that } p = q\cdot g\ (\text{transitive}).
\endaligned
\]

\item There is an open covering \(\{U_{\alpha} \} \) of \(\mathcal{M} \) such that on each \(U_{\alpha} \), there is a local section \(s_{\alpha} \).
\end{itemize}
Then, \(\pi : \mathcal{P} \to \mathcal{M} \) is a principal bundle with structure group \(G\).
\end{claim}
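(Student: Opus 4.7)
The plan is to convert each local section $s_\alpha : U_\alpha \to \mathcal{P}$ into an equivariant local trivialization by using the freeness and transitivity of the $G$-action to define the ``group coordinate'' of a point, and then to verify smoothness via the inverse function theorem. Concretely, for each $p \in \mathcal{P}_{U_\alpha}$ with $x = \pi(p)$, freeness and transitivity on the fiber $\mathcal{P}_x$ furnish a unique $g_\alpha(p) \in G$ such that $p = s_\alpha(x)\cdot g_\alpha(p)$. This yields the candidate trivialization
\[
\phi_\alpha : \mathcal{P}_{U_\alpha} \longrightarrow U_\alpha \times G, \qquad p \mapsto (\pi(p),\, g_\alpha(p)),
\]
whose set-theoretic inverse
\[
\psi_\alpha : U_\alpha \times G \longrightarrow \mathcal{P}_{U_\alpha}, \qquad (x,g) \mapsto s_\alpha(x)\cdot g,
\]
is manifestly smooth since both $s_\alpha$ and the action $\Phi$ are smooth. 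Equivariance $\phi_\alpha(p\cdot h) = \phi_\alpha(p)\cdot h$ is immediate from the uniqueness of $g_\alpha$, and the identity $\mathrm{pr}_1 \circ \phi_\alpha = \pi$ is built into the construction.

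The central step is upgrading $\phi_\alpha$ to a diffeomorphism; equivalently, showing that the smooth bijection $\psi_\alpha$ has a smooth inverse. I would invoke the inverse function theorem, reducing the problem to checking that $d_{(x,g)}\psi_\alpha$ is a linear isomorphism at every point. Given $(X,\xi) \in T_x U_\alpha \oplus T_g G$, applying $d\pi$ to $d\psi_\alpha(X,\xi) = 0$ immediately forces $X = 0$ since $\pi\circ\psi_\alpha = \mathrm{pr}_1$. The remaining equation $d\psi_\alpha(0,\xi) = 0$ is the statement that the fundamental vector field at $s_\alpha(x)\cdot g$ associated to the right-translate of $\xi$ vanishes there; by freeness of the action, this vector field is zero only when the underlying Lie algebra element is zero, forcing $\xi = 0$. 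A dimension count then yields surjectivity, and $\psi_\alpha$ is a local diffeomorphism, hence a diffeomorphism since it is also a bijection.

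The main obstacle is precisely this smoothness step, where the algebraic hypothesis (free, transitive action on fibers) must be translated into the analytic conclusion via the inverse function theorem; in particular, one needs to know that free actions of Lie groups generate nowhere-vanishing fundamental vector fields, which is a standard consequence of the definition but worth isolating as a lemma. A subtlety also arises in checking that $\dim\mathcal{P} = \dim\mathcal{M} + \dim G$, which is needed for the inverse function theorem argument; this follows a posteriori from the bijection $\phi_\alpha$ but can be established first by showing that each fiber $\mathcal{P}_x$ is a submanifold diffeomorphic to $G$ via the orbit map. Once $\phi_\alpha$ is established as an equivariant diffeomorphism over $U_\alpha$, the transition maps on $U_\alpha \cap U_\beta$ are automatically of the form $(x,g) \mapsto (x, \tau_{\alpha\beta}(x)\cdot g)$ for a smooth $\tau_{\alpha\beta}: U_\alpha \cap U_\beta \to G$ by equivariance, so all the axioms of a principal bundle with structure group $G$ are verified.
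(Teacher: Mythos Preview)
Your proposal is correct and follows essentially the same approach as the paper: both define the candidate trivialization via its inverse $\psi_\alpha(x,g) = s_\alpha(x)\cdot g$ and then argue that this smooth bijection is a diffeomorphism. The paper gives only a two-line sketch (noting that the map is equivariant, one-to-one, and an immersion), whereas you flesh out the analytic step via the inverse function theorem and the injectivity of fundamental vector fields under a free action; your additional remarks on the dimension count and the transition maps are accurate elaborations of points the paper leaves implicit.
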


\begin{proof}[Sketch of proof] We can construct a local trivialization on each \(U_{\alpha} \) by setting 
\[
\phi_{\alpha}^{-1}: \ U_{\alpha} \times G  \to \mathcal{P}_{U_{\alpha}}, 
\quad 
(x,g)\,\mapsto\, s_{\alpha}(x)\cdot g.
\]
It can be checked that this map is equivariant and that \(\phi_{\alpha} \) is indeed a diffeomorphism. In particular, $\phi_{\alpha}$ is one-to-one and is an immersion. 
\end{proof}

}

%----------------------------------------------------------------------------------------------------------------------

\subsection{ Proof of Claim \ref{lem1-15-dec-2024}}

{ 

\bse
In view of the anti-commutation properties \eqref{equa-29D}, we have (for $a=0$) 
\be
\la \gamma_0\gamma_\alpha\psi,\gamma_{\beta} \phi\ra + \la \gamma_0\gamma_{\beta} \psi,\gamma_{\alpha} \phi\ra 
= -2\eta_{\alpha\beta} \la \gamma_0\psi,\phi\ra.
\ee
Given any $v \in \RR^{3,1}$, we have 
\be
\la \gamma_0 v\cdot \psi,v\cdot\phi\ra = - \eta(v,v)\la \gamma_0 \psi,\phi\ra,
\ee
therefore for any $v\in S_{3,1}^{\pm}$ we find 
\be
\la\gamma_0 v\cdot \psi,v\cdot\phi\ra = \mp\la \gamma_0\psi,\phi \ra.
\ee
Any element $u\in \text{Sp}_{3,1}^+$ is composed of an even number of $\gamma(v)$ (for some $v\in \RR^{3,1}$) and, consequently, 
\be
\la \gamma_{\alpha}u\psi,u\phi\ra = \la \gamma_{\alpha} \psi,\phi\ra.
\ee
\ese

}

%---------------------------------------------------------------------------------------------------------------------------------------------

\subsection{ Proof of Lemma~\ref{lem1-16-dec-2024}}

{ 

As stated in \eqref{eq1-16-dec-2024}, the group \(\mathrm{Spin}_{3,1}^+\) acts from the right on \(\mathrm{Spin}^+(\mathcal{M})\), so for any $u\in\mathrm{Spin}_{3,1}^+$ we can write 
\begin{equation} \label{eq4-16-dec-2024}
\aligned
\Lambda(p\star u) 
& =  \gamma_e^{-1}(\gamma_e(p)\ u)(e_0,e_1,e_2,e_3)\big(\gamma_e^{-1}(\gamma_e(p)\cdot u)\big)^{-1}
\\
& = p\cdot\gamma_e^{-1}(u)(e_0,e_1,e_2,e_3)\big(\gamma_e^{-1}(u)\big)^{-1} \cdot p^{-1}.
\endaligned
\end{equation}
Let $B\in \mathrm{SO}_{3,1}^+$ such that
$$
\gamma_e^{-1}(u)(e_0,e_1,e_2,e_3)\big(\gamma_e^{-1}(u)\big)^{-1} = (e_0,e_1,e_2,e_3)B.
$$
Acting the above identity by $\gamma_e$ and observing that $\gamma_e^{-1}(u^{-1}) = \big(\gamma_e^{-1}(u)\big)^{-1}$, we obtain 
$$
u(\gamma_0,\gamma_1,\gamma_2,\gamma_3)u^{-1} = (\gamma_0,\gamma_1,\gamma_2,\gamma_3)B, 
$$
so this gives us 
\begin{equation} \label{eq3-16-dec-2024}
\gamma_e^{-1}(u)(e_0,e_1,e_2,e_3)\big(\gamma_e^{-1}(u)\big)^{-1} = (e_0,e_1,e_2,e_3)\lambda(u)
\end{equation}
where we emphasize that $\lambda(u)\in\mathrm{SO}_{3,1}^+$. Substituting this into \eqref{eq4-16-dec-2024}, we obtain
$$
\Lambda(p\star u) = p\cdot (e_0,e_1,e_2,e_3)\cdot p^{-1} \lambda(u) = \Lambda(p) \lambda(u).
$$
This completes the proof of Lemma~\ref{lem1-16-dec-2024}. 

}

%---------------------------------------------------------------------------------------------------------------------------------------------------------------

\subsection{ Proof of Proposition~\ref{prop1-19-dec-2024}}

{ 

We need the following observation. 

\begin{lemma}
Let $(e_0,e_1,e_2,e_3)$ and $(f_0,f_1,f_2,f_3)$ be two orthonormal frames on $\mathcal{M}$. For any $u\in\mathrm{Spin}_{3,1}^+$, one has 
\begin{equation} \label{eq4-18-dec-2024}
\tau_{e \to f} \cdot u\cdot \tau_{e \to f}^{-1} = \gamma_e\circ\gamma_f^{-1}(u).
\end{equation}
\end{lemma}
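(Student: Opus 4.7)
The plan is to reduce the identity \eqref{eq4-18-dec-2024} to the defining relation \eqref{eq1-18-dec-2024} of $\tau_{e \to f}$ by testing it on a generating set of $\mathrm{Spin}_{3,1}^+$. Both sides, regarded as maps $\mathrm{Spin}_{3,1}^+ \to \mathrm{Spin}_{3,1}^+$, are group homomorphisms: the left-hand side is inner conjugation by $\tau_{e \to f}$, and the right-hand side is the composition of the two Lie group isomorphisms $\gamma_f^{-1} : \mathrm{Spin}_{3,1}^+ \to \mathrm{Spin}(T_x\mathcal{M})^+$ and $\gamma_e : \mathrm{Spin}(T_x\mathcal{M})^+ \to \mathrm{Spin}_{3,1}^+$. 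Since $\mathrm{Spin}_{3,1}^+$ is generated by Clifford products of elements of the form $\gamma(v)$ with $v \in T_x\mathcal{M}$ suitably normalized, it suffices to verify the equality for $u = \gamma_f(v)$ with $v$ a single vector of $T_x\mathcal{M}$; extending to all of $\mathrm{Spin}_{3,1}^+$ will then be automatic by multiplicativity.

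The bulk of the argument is a direct unwinding of the coordinate maps. For a single vector $v \in T_x\mathcal{M}$, I would decompose $v$ in both frames: writing $v = f_j a^j = e_i b^i$, the relation $(f_0,f_1,f_2,f_3) = (e_0,e_1,e_2,e_3)\,T_{e \to f}$ produces the componentwise change of coordinates $b^i = (T_{e \to f})^i{}_j\,a^j$. Applying the respective coordinate maps then yields
\begin{equation*}
\gamma_f(v) = \gamma_j\,a^j, \qquad \gamma_e(v) = \gamma_i\,b^i = \gamma_i\,(T_{e \to f})^i{}_j\,a^j.
\end{equation*}
Now \eqref{eq1-18-dec-2024} expressed componentwise reads $\tau_{e \to f}\,\gamma_j\,\tau_{e \to f}^{-1} = \gamma_i\,(T_{e \to f})^i{}_j$. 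Multiplying by $a^j$ and summing over $j$ gives precisely $\tau_{e \to f}\,\gamma_f(v)\,\tau_{e \to f}^{-1} = \gamma_e(v)$, which is \eqref{eq4-18-dec-2024} for $u = \gamma_f(v)$, i.e.\ $\gamma_f^{-1}(u) = v$.

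The final step is purely algebraic: for a general $u = \gamma_f(v_1)\cdots\gamma_f(v_{2k}) \in \mathrm{Spin}_{3,1}^+$ (with $v_m \in T_x\mathcal{M}$ of appropriate norm), the homomorphism property of conjugation and the multiplicativity of $\gamma_e \circ \gamma_f^{-1}$ propagate the identity from each factor to the whole product, establishing \eqref{eq4-18-dec-2024} in full generality. I anticipate no genuine obstacle here; the only point that requires mild care is tracking the coordinate convention for $T_{e \to f}$ so that the contraction pattern in $\gamma_i\,(T_{e \to f})^i{}_j\,a^j$ matches exactly the right-hand side of \eqref{eq1-18-dec-2024} as written, which is essentially bookkeeping on indices and consistent with the conventions already used in the proof of Lemma~\ref{lem1-30-aout-2025}.
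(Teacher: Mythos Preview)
Your proposal is correct and follows essentially the same approach as the paper's proof: both verify the identity first on images of single vectors by unwinding the coordinate maps and invoking the defining relation \eqref{eq1-18-dec-2024} of $\tau_{e\to f}$, then extend to all of $\mathrm{Spin}_{3,1}^+$ by multiplicativity. The paper also explicitly notes that this lemma is a reformulation of Lemma~\ref{lem1-30-aout-2025}, which is precisely the consistency you point to in your final remark.
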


\begin{proof} This is in fact a reformulation of Lemma~\ref{lem1-30-aout-2025}. For $u$ being a vector,
We decompose $u = u^{\alpha} \gamma_{\alpha}$, and set $U := (u^0,u^1,u^2,u^3)^T$. Then we have 
$$
\gamma_f^{-1}(u) = u^{\alpha}f_{\alpha} = (f_0,f_1,f_2,f_3)U = (e_0,e_1,e_2,e_3)T_{e \to f}U.
$$
We then compute successively 
$$ 
\aligned
& \gamma_e\circ\gamma_f^{-1}(u) = (\gamma_0,\gamma_1,\gamma_2,\gamma_3)T_{e \to f}U 
= \tau_{e \to f} \cdot(\gamma_0,\gamma_1,\gamma_2,\gamma_3)\cdot\tau_{e \to f}^{-1}U 
\\
& = \tau_{e \to f} \cdot\big((\gamma_0,\gamma_1,\gamma_2,\gamma_3)U\big)\cdot\tau_{e \to f}^{-1} 
   =  \tau_{e \to f} \cdot u\cdot \tau_{e \to f}^{-1}. \hskip4.cm  \qedhere
\endaligned 
$$
And for $u$ a product of finite many vectors, the desired property still holds obviously.
\end{proof}

\begin{proof}[Proof of Proposition~\ref{prop1-19-dec-2024}]
We only need to prove the two properties in Definition~\ref{def1-17-dec-2024}. Observe that $F_{f \to e}$ denotes the right action of $\tau_{e \to f}$ on $\text{Sp}^+(\mathcal{M})$, and therefore preserves the fiber. We also recall that
$$
\Lambda_f(p) = \Lambda_e(p)T_{e \to f} = \Lambda_e(p)\lambda(\tau_{e \to f}).
$$
For any $u\in\text{Sp}_{3,1}^+$, we then have 
$$
\aligned
&F_{f \to e}(p\cdot \gamma_f^{-1}(u)) = p\cdot\gamma_f^{-1}(u)\cdot\gamma_e^{-1}(\tau_{e \to f}) 
= p\cdot \gamma_e^{-1}(\gamma_e\circ\gamma_f^{-1}(u)\cdot \tau_{e \to f}) 
\\
& = p\cdot \gamma_e^{-1}(\tau_{e \to f})\cdot \gamma_e^{-1}(\tau_{e \to f}^{-1} \cdot \gamma_e\circ\gamma_f^{-1}(u)\cdot \tau_{e \to f})
  =  F_{f \to e}(p)\cdot \gamma_e^{-1}(u), 
\endaligned
$$
which establishes the equivariance with respect to $G$.
\end{proof}

}

%=============================================================================

\section{Properties of the spinorial connection}
\label{section=N20}

\subsection{ Construction of the Levi-Civita connection on $\mathrm{SO}^+(\Mcal)$}
\label{Appendix--B5}

{ 

\paragraph{Connection form.}

Let $\mathcal{P}$ be a principal bundle with structure group $G$. A connection $1$-form $A$ is a differential form defined on $\mathcal{P}$ with value in $\mathfrak{g}$ with the following properties:
\bei 

\item[1.] Let $\widetilde{X} \in \mathscr{V}_p$, then $A(\widetilde{X}) = X$;

\item[2.] One has $(r_g)_*(A) = Ad_{g^{-1}} \circ A$.

\eei
\noindent 
The relation between a connection 1-form and a horizontal distribution is described as follows. 

\begin{proposition}[cf. Theorem 5.22 of \cite{Hamilton-2017}]
There is a one-to-one correspondence between the connection 1-forms and the right-invariant horizontal distributions.

\bei 

\item[1.] Let $\mathscr{H}$ be a right-invariant horizontal distribution (an Ehresmann connection). With the notation $\widetilde{X}_p + Y_p\in T_pP$ with $\widetilde{X}_p \in V_p$ and $Y_p\in H_p$, the mapping  
$$
A_p(\widetilde{X}_p + Y_p) := X\in \mathfrak{g}
$$
is a connection 1-form. 

\item[2.] Let $A$ be a connection 1-form. Then $\mathscr{H}_p = \ker A_p$ is a right-invariant horizontal distribution.

\eei 

\end{proposition}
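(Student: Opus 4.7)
The plan is to establish two constructions and then verify they are mutual inverses. The key algebraic ingredient is that the map $X \mapsto \widetilde{X}_p$ from $\mathfrak{g}$ to $\mathscr{V}_p$ is a linear isomorphism (since the $G$-action is free and fiber-preserving, so the orbit map $\phi_p : G \to \mathcal{P}_x$ is an immersion onto the fiber and its differential at $e$ is a bijection onto $T_p \mathcal{P}_x = \mathscr{V}_p$). This identification will be used implicitly throughout.

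For part (1), given a right-invariant horizontal distribution $\mathscr{H}$, the decomposition $T_p\mathcal{P} = \mathscr{V}_p \oplus \mathscr{H}_p$ makes the prescription $A_p(\widetilde{X}_p + Y_p) := X$ well-defined and linear at each point; smoothness in $p$ follows from the smoothness of the distribution $\mathscr{H}$. Property~1 (vertical reproduction) is immediate from the definition. For property~2, I would take $v = \widetilde{X}_p + Y_p \in T_p\mathcal{P}$ and compute $(r_g)_*v$ at $p\cdot g$. Right-invariance of $\mathscr{H}$ gives $(r_g)_* Y_p \in \mathscr{H}_{p\cdot g}$, while for the vertical part I would use the standard identity $(r_g)_* \widetilde{X}_p = \widetilde{\mathrm{Ad}_{g^{-1}}(X)}_{p\cdot g}$, which follows by differentiating $\phi_p(h)\cdot g = \phi_{p\cdot g}(g^{-1}hg)$ at $h=e$. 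Thus $A_{p\cdot g}((r_g)_* v) = \mathrm{Ad}_{g^{-1}}(X) = \mathrm{Ad}_{g^{-1}}(A_p v)$, which is exactly the equivariance property.

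For part (2), given a connection 1-form $A$, I would set $\mathscr{H}_p := \ker A_p$. Property~1 of $A$ asserts that $A_p$ restricted to $\mathscr{V}_p$ is the inverse of the isomorphism $X \mapsto \widetilde{X}_p$; in particular it is surjective onto $\mathfrak{g}$ and its restriction to $\mathscr{V}_p$ is bijective. Consequently $\mathscr{V}_p \cap \ker A_p = \{0\}$ and $\dim \ker A_p = \dim T_p\mathcal{P} - \dim \mathfrak{g} = \dim \mathcal{M}$, so $T_p\mathcal{P} = \mathscr{V}_p \oplus \mathscr{H}_p$. Smoothness of the distribution follows because $A$ is a smooth form of constant rank. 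Right-invariance is the substantive step: for $Y \in \mathscr{H}_p$, I would compute $A_{p\cdot g}((r_g)_*Y) = \mathrm{Ad}_{g^{-1}}(A_p(Y)) = 0$ using property~2 of $A$, hence $(r_g)_* Y \in \mathscr{H}_{p\cdot g}$.

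To close the correspondence, I would verify that the two constructions are mutually inverse: starting from $\mathscr{H}$, building $A$, then recovering $\ker A = \mathscr{H}$ is immediate from the definition of $A$; conversely, starting from $A$, building $\mathscr{H} = \ker A$, and then reading off the vertical component via the decomposition returns the original $A$ by property~1. The main (mild) obstacle is the computation of $(r_g)_* \widetilde{X}_p$ in the proof of equivariance --- the rest is linear algebra applied pointwise together with standard smoothness considerations.
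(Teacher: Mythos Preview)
The paper does not prove this proposition; it is stated with a reference to Hamilton's textbook (Theorem~5.22) and then used without further argument. Your proposal is the standard textbook proof and is correct: the vertical-reproduction property is built into the definition, the equivariance check reduces to the identity $(r_g)_*\widetilde{X}_p = \widetilde{\mathrm{Ad}_{g^{-1}}(X)}_{p\cdot g}$, and the inverse construction via $\ker A_p$ goes through by the dimension count you give.
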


In a local trivialization (or equivalently, a local gauge), the connection $1$-form can be written as follows. Let $s$ be the local section defined on an opens set $U\subset\mathcal{M}$ which determines the local trivialization by
$$
U\times G\mapsto \mathcal{P}_U, \qquad (x,g)\mapsto s(x)\cdot g.
$$
We denote by, for $X\in T_x\mathcal{M}$
$$
A^s_x(X) := A\circ d_xs(X) = s^*A(X)\in \mathfrak{g}.
$$
This $1-$form defined on $\mathcal{M}$ is called a local connection $1-$form.

Let $\mathcal{E} = \mathcal{P} \times_{\rho}V$ be a vector bundle associated with the principal bundle $\mathcal{P}$. Let $s: U\mapsto \mathcal{P}$ be a local section of $\mathcal{P}$. We recall that [cf. Proposition 4.7.6 of Hamilton] any local section of $\mathcal{E}$ can be written as $E(x) = [s(x), v(x)]$, where $v$ is a $C^{\infty}$ map from $U$ to $V$. Then the covariant derivative can be written within the above expression as (cf. Definition 5.9.3 of Hamilton)
\begin{equation} \label{eq6-01-jan-2025}
\nabla_XE = [s(x), \diff\psi(X) + \rho_*(A_s(X))\psi].
\end{equation}

%-------------------------------------------------------------------------------------

\paragraph{Proof of Proposition~\ref{thm2-08-jan-2025}.}

Proposition~\ref{thm2-08-jan-2025} provides the Levi-Civita connection on $\mathrm{SO}^+(\mathcal{M})$. Let $\nabla$ be the Levi-Civita connection defined on $T\mathcal{M}$. Let $p = (p_0,p_1,p_2,p_3) \in \mathrm{SO}^+(\mathcal{M})$. Let $x = \pi(p)\in \mathcal{M}$. Then $(p_0,p_1,p_2,p_3)$ is an orthonormal basis of $T_x\mathcal{M}$ with positive time and spatial orientation. Let $\gamma:(- \vep,\vep)\mapsto \mathcal{M}$ be a smooth curve with $\gamma(0) = x$, $\dot\gamma(0) = X\in T_x\mathcal{M}$. Let $E_i$ be the vector field defined along $\gamma$ by
\begin{equation} \label{eq2-09-jan-2025}
\aligned
&E_i(0) = p_i(x) \quad (i=0,1,2,3), \qquad \nabla_{\dot{\gamma}(t)}E_i(t) = 0.
\endaligned
\end{equation}
Let 
$
H_p: T_x\mathcal{M} \mapsto T_p\mathrm{SO}^+(\mathcal{M}), \qquad X\mapsto H_p(X)
$
be the mapping defined by
$$
H_p(X) = \frac{\mathrm{d}}{\mathrm{d}t} \Big|_{t=0}(\gamma(t),E_0(t),E_1(t),E_2(t),E_3(t)).
$$

\begin{lemma}
The map $H_p$ is linear and injective.
\end{lemma}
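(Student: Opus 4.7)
My plan is to prove both assertions by extracting an explicit coordinate formula for $H_p(X)$ from the parallel-transport ODE, which will make the dependence on $X$ transparently linear, and then read off injectivity from the projection $\pi$.

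First, I would check that $H_p(X)$ is well-defined, i.e.\ that it depends on $X$ only and not on the choice of curve $\gamma$ realizing $X$. Pick a local chart $\{x^\alpha\}$ near $x$ and write $E_i = E_i^\alpha(t)\,\partial_\alpha$. The parallel transport equation \eqref{eq2-09-jan-2025} is the linear ODE
\begin{equation*}
\frac{dE_i^\alpha}{dt}(t) + \Gamma^\alpha_{\beta\gamma}(\gamma(t))\,\dot\gamma^\beta(t)\,E_i^\gamma(t) = 0,\qquad E_i^\alpha(0) = p_i^\alpha.
\end{equation*}
Evaluating at $t=0$ gives
\begin{equation*}
\frac{dE_i^\alpha}{dt}(0) = -\Gamma^\alpha_{\beta\gamma}(x)\,X^\beta\,p_i^\gamma,
\end{equation*}
which depends only on $X$ (and on the fixed data $p_i$, $\Gamma$ at $x$). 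Consequently, in the trivialization of $\mathrm{SO}^+(\mathcal{M})$ induced by $\{x^\alpha\}$, the vector $H_p(X)$ has components
\begin{equation*}
H_p(X) = X^\beta\,\partial_\beta \;-\; \Gamma^\alpha_{\beta\gamma}(x)\,X^\beta\,p_i^\gamma\,\frac{\partial}{\partial (E_i^\alpha)}\bigg|_p,
\end{equation*}
where the second group of components lies in the fiber direction. This expression is intrinsic, proving well-definedness.

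Linearity of $H_p$ is then immediate from this coordinate formula, since every component depends linearly on $X^\beta$. Alternatively, one can argue abstractly: given $X,Y\in T_x\mathcal{M}$ and scalars $\alpha,\beta$, choose a curve realizing $\alpha X+\beta Y$; the solution of the parallel-transport ODE with initial data $p_i$ is a linear functional of the driving term $\dot\gamma$ at $t=0$, so the tangent vector to $(\gamma(t),E_0(t),\ldots,E_3(t))$ at $t=0$ depends linearly on $\dot\gamma(0)$.

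Finally, injectivity follows from a projection argument: composing $H_p$ with $d_p\pi$ yields
\begin{equation*}
d_p\pi \circ H_p(X) = \frac{d}{dt}\bigg|_{t=0}\pi\bigl(\gamma(t),E_0(t),E_1(t),E_2(t),E_3(t)\bigr) = \dot\gamma(0) = X,
\end{equation*}
so $d_p\pi \circ H_p = \mathrm{Id}_{T_x\mathcal{M}}$, which forces $H_p$ to be injective. No serious obstacle is expected here; the only subtle point is recognizing that the fiber-direction components of $H_p(X)$ are given by the Christoffel symbols contracted with $X$, which is precisely what will later identify the image of $H_p$ with the horizontal subspace of the Levi--Civita connection on $\mathrm{SO}^+(\mathcal{M})$.
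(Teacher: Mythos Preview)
Your proof is correct and follows essentially the same route as the paper: both extract an explicit formula for $H_p(X)$ by differentiating the parallel-transport ODE at $t=0$, making linearity manifest, and then observe that the base component of $H_p(X)$ is $X$ itself, yielding injectivity. The only cosmetic differences are that the paper works with a local orthonormal frame $(e_0,\ldots,e_3)$ (writing $\dot E_i^j e_j = -E_i^j(0)\nabla_X e_j$) rather than a coordinate chart with Christoffel symbols, and that your injectivity argument via $d_p\pi\circ H_p = \mathrm{Id}$ is spelled out more explicitly than the paper's one-line ``it is clear''.
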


\begin{proof} It is immediate that
$H_p(X) = (X, \dot E_0(0), \dot E_1(0),\dot E_2(0), \dot E_3(0))$. 
Let $(e_0,e_1,e_2,e_3)$ be an orthonormal frame defined around $x = \pi(p)$ and $E_i(t) = e_jE_i^j(t)$. Then
$$
\nabla_XE_i = \dot E_i^j e_j + E_i^j\nabla_Xe_j.
$$
At $t=0$, one has $\dot E_i^j e_j = -E_i^j(0)\nabla_Xe_j$
and we arrive at 
\begin{equation} \label{eq1-09-jan-2025}
H_p(X) = (X, -E_0^j(0)\nabla_Xe_j,-E_1^j(0)\nabla_Xe_j,-E_2^j(0)\nabla_Xe_j,-E_3^j(0)\nabla_Xe_j).
\end{equation}
It is clear that $H_p$ is a linear isomorphism.
\end{proof}

For a different choice of orthonormal frame, we show the following property. Consequently, by following the procedure described in~Sections~\ref{subsec1-09-jan-2025} and \ref{subsec2-09-jan-2025}, we can construct the covariant derivative operator on $\ourS_e(\mathcal{M})$.

\begin{proposition}
\label{la--construction}
For each $p\in\mathrm{SO}^+(\mathcal{M})$, the image of $H_p$, denoted by $\mathscr{H}^{\mathrm{SO}}_p$, forms a Ehresmann connection on $\mathrm{SO}^+(\mathcal{M})$.
\end{proposition}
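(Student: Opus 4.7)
The plan is to verify the three defining properties of an Ehresmann connection in the order: smoothness, transversality with the vertical distribution, and right-invariance under $\mathrm{SO}^+_{3,1}$. The first two are of a local/dimensional nature and should be dispatched quickly from the explicit formula \eqref{eq1-09-jan-2025}; the substantive content lies in the third.

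\textbf{Step 1: Smoothness of the distribution.} Fix an arbitrary point $p_0\in\mathrm{SO}^+(\mathcal{M})$ and pick a local orthonormal frame $(e_0,e_1,e_2,e_3)$ defined in a neighborhood $U$ of $\pi(p_0)$. This induces a local trivialization of $\pi^{-1}(U)\simeq U\times \mathrm{SO}^+_{3,1}$ under which a point $p\in\pi^{-1}(U)$ with components $p_i = E_i^j(\pi(p))\,e_j$ is encoded by the matrix $(E_i^j)$. The formula \eqref{eq1-09-jan-2025} then shows that for each $X\in T_x\mathcal{M}$,
\be
H_p(X) = \bigl(X,\,-E_0^j\nabla_X e_j,\ldots,-E_3^j\nabla_X e_j\bigr),
\ee
in which the Christoffel symbols governing $\nabla_X e_j$ are smooth functions on $U$. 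Hence $H_p(X)$ depends smoothly on $(p,X)$ and, by taking $X$ to run through any smooth local frame on $\mathcal{M}$, we obtain a smooth local frame for $\mathscr{H}^{\mathrm{SO}}_p$.

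\textbf{Step 2: Transversality.} By construction, $d_p\pi\circ H_p = \mathrm{id}_{T_x\mathcal{M}}$ and therefore $H_p$ is injective with image transverse to $\ker(d_p\pi)=\mathscr{V}_p$. A dimension count $\dim\mathrm{SO}^+(\mathcal{M})=\dim\mathcal{M}+\dim\mathrm{SO}^+_{3,1}=4+6=10$, combined with $\dim\mathscr{V}_p=6$ and $\dim\mathrm{Im}(H_p)=4$, then gives $T_p\mathrm{SO}^+(\mathcal{M})=\mathscr{V}_p\oplus\mathscr{H}^{\mathrm{SO}}_p$.

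\textbf{Step 3: Right-invariance.} This is the main point. Let $g=(g^j_i)\in\mathrm{SO}^+_{3,1}$ and let $\gamma$ be a smooth curve with $\gamma(0)=x$ and $\dot\gamma(0)=X$. Denote by $E_0,\ldots,E_3$ the parallel transport of $p_0,\ldots,p_3$ along $\gamma$, as in \eqref{eq2-09-jan-2025}. By $\RR$-linearity of $\nabla_{\dot\gamma(t)}$, the vector fields $E_i':=E_j g^j_i$ are parallel along $\gamma$ and satisfy $E_i'(0)=p_j(x)g^j_i=(p\cdot g)_i$. Hence the curve in $\mathrm{SO}^+(\mathcal{M})$ used to define $H_{p\cdot g}(X)$ is precisely the right-translate by $g$ of the curve used to define $H_p(X)$. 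Differentiating at $t=0$ yields
\be
H_{p\cdot g}(X) = (d_p r_g)\bigl(H_p(X)\bigr),
\ee
which is the desired right-invariance $(r_g)_*\mathscr{H}^{\mathrm{SO}}_p = \mathscr{H}^{\mathrm{SO}}_{p\cdot g}$. Combining the three steps completes the proof.

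\textbf{Main obstacle.} The only nontrivial step is Step 3, and its crux is exactly the $\RR$-linearity of the Levi-Civita parallel transport along $\gamma$, which guarantees that parallel transport commutes with the action of the constant matrix $g$. The remainder of Proposition~\ref{thm2-08-jan-2025} (uniqueness of $\mathscr{H}^{\mathrm{SO}}$ and the fact that the induced covariant derivative on $T\mathcal{M}$ via Proposition~\ref{prop1-01-jan-2025} recovers $\nabla$) then follows tautologically: a local section of $T\mathcal{M}$ is parallel along $\gamma$ iff its lift to $\mathrm{SO}^+(\mathcal{M})$ (extended to any smooth orthonormal moving frame through it) has horizontal tangent vector at every $t$, and any Ehresmann connection on $\mathrm{SO}^+(\mathcal{M})$ inducing $\nabla$ must have $H_p(X)$ as horizontal lift at $p$ of $X$, which uniquely determines the distribution.
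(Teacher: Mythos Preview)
Your proof is correct and follows essentially the same approach as the paper's: smoothness via the explicit formula \eqref{eq1-09-jan-2025}, transversality by dimension count (the paper phrases this as identifying $\mathscr{V}_p$ explicitly), and right-invariance by observing that parallel transport along $\gamma$ commutes with the constant linear action of $g\in\mathrm{SO}^+_{3,1}$, so that the horizontal lift curve for $p\cdot g$ is $r_g$ applied to the one for $p$. Your Step~3 is phrased slightly more conceptually (differentiating the translated curve directly), whereas the paper writes out the components $F_j=E_iA^i_j$ and checks the identity $H_{p\cdot A}(X)=H_p(X)\cdot A$ via \eqref{eq1-09-jan-2025}, but the content is the same.
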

\begin{proof}
Let $U$ be a open neighborhood of $x = \pi(p)$ on $\mathcal{M}$ and $(x^0,x^1,x^2,x^3)$ be a local coordinate system. It is clear that $H_p(\del_{\alpha}$ generates $\mathscr{H}^{\mathrm{SO}}_p$. For the smoothness, we only need to show that 
$$
Y_{\alpha}: q\mapsto H_q(\del_{\alpha})
$$
are (locally defined) smooth vector fields of $\mathrm{SO}^+(\mathcal{M})$. This can be checked by \eqref{eq1-09-jan-2025} where $H_p(X)$ smoothly depends on $E_i^j$ and $x$. Then it also direct to see that $\mathscr{H}^{\mathrm{SO}}_p\oplus \mathscr{V}_p = T_p\mathrm{SO}^+(\mathcal{M})$ because
$$
\mathscr{V}_p = \ker(d_p\pi) = \{(0,A_i^je_j)| A\in \mathrm{SO}_{3,1}^+\}.
$$
Finally we check that $\mathscr{H}^{\mathrm{SO}}$ is invariant under the right action of $\mathrm{SO}_{3,1}^+$. To see this we observe that for $A\in \mathrm{SO}_{3,1}^+$,
$$
r_A(p) = p\cdot A = (p_0,p_1,p_2,p_3)A.
$$
This is a linear action thus $d_pr_A(p) = r_A(p)$.

If $E_i$ denotes the parallel vector fields defined by \eqref{eq2-09-jan-2025}, then 
$F_j := E_iA_j^i = e_kE_i^kA_j^i$
are the curves translated from $E_i$ by $A$. Due to the fact that $A$ preserves the metric, $F_i$ are still orthonormal and parallel along $\gamma$. Thus
$$
\aligned
H_{p\cdot A}(X) & =  \frac{\mathrm{d}}{\mathrm{d}t} \Big|_{t = 0}(\gamma(t), F_0(t),F_1(t),F_2(t),F_3(t))
\\
& =  (X, -E_i^jA_0^i\nabla_Xe_j,-E_i^jA_1^i\nabla_Xe_j,-E_i^jA_2^i\nabla_Xe_j,-E_i^jA_3^i\nabla_Xe_j)
\\
& = (X, -E_1^j(0)\nabla_Xe_j,-E_2^j(0)\nabla_Xe_j,-E_3^j(0)\nabla_Xe_j)\cdot A, 
\endaligned
$$
which yields the desired result.
\end{proof}

}
%------------------------------------------------------------------------------------------------------------------------------------------------------------

\subsection{ Proof of Proposition~\ref{proposition-JD44}}
\label{section-JD44}

{ 

Observe that $\nabla^f_X\Phi = \big(\diff_x\Phi(X)\big)^{\text{v}}$, where $(\cdot)^{\text {v}}$ is the projection with respect to the decomposition
$$
T_{\Phi(x)} \ourS_f(\mathcal{M}) = \mathscr{H}^f_{\Phi(x)} \oplus \mathscr{V}_{\Phi(x)}, 
\qquad
\quad 
\mathscr{V}_{\Phi(x)} = \ker(\diff_{\Phi(x)} \pi_{\ourS_f(\mathcal{M})}).
$$
Consequently, in view of Proposition~\ref{thm1-08-jan-2025}, we have 
$$
\diff_x\Phi(X) - \nabla^f_X\Phi\in  \mathscr{H}^f_{\Phi(x)} \quad \Rightarrow \quad \diff_{\Phi(x)}F_{f \to e}(\diff_x\Phi(X) - \nabla^f_X\Phi) \in \mathscr{H}^e_{F_{f \to e}(\Phi(x))}.
$$
Observe also that
$
\diff_{\Phi(x)}F_{f \to e}(\diff_x\Phi(X)) = \diff_x(F_{f \to e} \circ\Phi)(X) 
$
and therefore, 
$$
\diff_x(F_{f \to e} \circ\Phi)(X)  - \diff_{\Phi(x)}F_{f \to e}(\nabla^f_X\Phi\big|_x) \in \mathscr{H}^e_{F_{f \to e}(\Phi(x))}. 
$$
Since $\nabla^f_X\Phi\in \mathscr{V}_{\Phi(x)}$, we also claim that
\begin{equation}
\diff_{\Phi(x)}F_{f \to e}(Y) \in \mathscr{V}_{\Phi(x)},
\qquad Y\in  \mathscr{V}_{\Phi(x)}.
\end{equation}
Indeed, this follows from the fact (cf. \eqref{eq5-01-jan-2025}) that $F_{f \to e}$ is linear on each fiber $\ourS_f(\mathcal{M})_x$ whose tangent space is precisely $\mathscr{V}_{\Phi(x)}$. By recalling the definition of covariant derivative (cf.~\eqref{eq3-08-jan-2025}), this leads us to
\begin{equation}
\diff_{\Phi(x)}F_{f \to e}(\nabla^f_X\Phi\big|_x) = \big(\diff_x(F_{f \to e} \circ\Phi)(X)\big)^{\text{v}} = \nabla^e_X\big(F_{f \to e} \circ\Phi\big)_x.
\end{equation}
This completes the proof of Proposition~\ref{proposition-JD44}. 
}

%============================================================================

\section{Proof of Lemma \ref{lem1-01-march-2025}} 
\label{section=N21}

{ 
We work in a tetrad $\{e_{\alpha}\}$. Let $\psi$ be the coordinate of $\Psi$. We recall the Levi-Civita connection form associate to $\{e_{\alpha}\}$ (cf. \cite[Definition~6.10.1]{Hamilton-2017})
$$
\nabla e_{\alpha} = \omega_{\alpha\beta}\eta^{\beta\gamma}\otimes e_{\gamma},
$$  
and, thanks to \cite[Prpposition~6.10.9]{Hamilton-2017},
\begin{equation}
\nabla_X\psi = \diff\psi (X) + \frac{1}{4}\omega_{\alpha\beta}(X)\gamma^{\alpha\beta}\psi,
\end{equation}
where we recall the Clifford algebra relations $\gamma^{\alpha\beta} : =\frac{1}{2}[\gamma^{\alpha},\gamma^{\beta}]$. We rely on the identity $\diff\omega(X,Y) = X(\omega(Y)) - Y(\omega(X)) - \omega([X,Y])$ and compute
$$
\aligned
\nabla_X(\nabla_Y\psi) 
& =  \nabla_X(\diff\psi(Y)) + \frac{1}{4}\nabla_X\big(\omega_{\alpha\beta}(Y)\gamma^{\alpha\beta} \psi\big)
\\
& = X(Y(\psi)) + \frac{1}{4} \omega_{\mu\nu}(X)\gamma^{\mu\nu}Y(\psi) 
+ \frac{1}{4}X(\omega_{\alpha\beta}(Y)\gamma^{\alpha\beta} \psi)
+ \frac{1}{16} \omega_{\mu\nu}(X)\gamma^{\mu\nu} \omega_{\alpha\beta}(Y)\gamma^{\alpha\beta} \psi
\\
& = X(Y(\psi))  + \frac{1}{4} \omega_{\mu\nu}(X)\gamma^{\mu\nu}Y(\psi)
+ \frac{1}{4} \omega_{\alpha\beta}(Y)\gamma^{\alpha\beta}X(\psi)
+ \frac{1}{4}X(\omega_{\alpha\beta}(Y)\gamma^{\alpha\beta})\psi
\\
& \quad + \frac{1}{16} \omega_{\mu\nu}(X)\gamma^{\mu\nu} \omega_{\alpha\beta}(Y)\gamma^{\alpha\beta} \psi.
\endaligned
$$
then
$$
\gamma^{\alpha\beta} =
\begin{cases}
0,\quad & \alpha=\beta
\\
\gamma^\alpha \gamma^\beta,\quad & \alpha\neq \beta,
\end{cases}
\quad 
[\gamma^{\alpha\beta},\gamma^{\mu\nu}] =
\begin{cases} 
0,& \quad \alpha = \beta,\quad \text{or}\quad \mu=\nu,
\\
[\gamma^\alpha \gamma^\beta,\gamma^\mu \gamma^\nu], & \quad \alpha\neq \beta,\quad \text{and}\quad \mu\neq \nu.
\end{cases}
$$
and also use
$$
\aligned
&[\gamma^\alpha \gamma^\beta,\gamma^\mu \gamma^\nu] = 0, \quad \text{if all indices are distinct,}
\\
&[\gamma^\alpha \gamma^\beta,\gamma^\alpha \gamma^\mu] = 2\eta^{\alpha\alpha} \gamma^\beta \gamma^\mu,
\qquad
&&[\gamma^\alpha \gamma^\beta,\gamma^\mu \gamma^\alpha] = -2\eta^{\alpha\alpha} \gamma^\beta \gamma^\mu,
\\
&[\gamma^\alpha \gamma^\beta,\gamma^\beta \gamma^\mu] = -2\eta^{\beta\beta} \gamma^\alpha \gamma^\mu,
\qquad
&&[\gamma^\alpha \gamma^\beta,\gamma^\mu \gamma^\beta] = 2\eta^{\beta\beta} \gamma^\alpha \gamma^\mu,
\endaligned
$$
and the remaining cases are also zero.
 
We compute the difference of products of connection forms, as follows: 
$$
\aligned
& \omega_{\mu\nu}(X)\gamma^{\mu\nu} \omega_{\alpha\beta}(Y)\gamma^{\alpha\beta} 
- \omega_{\mu\nu}(Y)\gamma^{\mu\nu} \omega_{\alpha\beta}(X)\gamma^{\alpha\beta}
\\
& = - \omega_{\nu\mu}(X)\omega_{\alpha\beta}(Y)[\gamma^{\alpha\beta},\gamma^{\nu\mu}]
\\
& = - \omega_{\alpha\mu}(X)\omega_{\alpha\beta}(Y)[\gamma^\alpha \gamma^\beta, \gamma^\alpha \gamma^\mu]
- \omega_{\mu\alpha}(X)\omega_{\alpha\beta}(Y)[\gamma^\alpha \gamma^\beta, \gamma^\mu \gamma^\alpha]
\\
& \quad - \omega_{\beta\mu}(X)\omega_{\alpha\beta}(Y)[\gamma^\alpha \gamma^\beta, \gamma^\beta \gamma^\mu]
- \omega_{\mu\beta}(X)\omega_{\alpha\beta}(Y)[\gamma^\alpha \gamma^\beta, \gamma^\mu \gamma^\beta]
\\
& = -2\omega_{\alpha\mu}(X)\omega_{\alpha\beta}(Y)\eta^{\alpha\alpha}\gamma^{\beta}\gamma^{\mu}
+ 2\omega_{\mu\alpha}(X)\omega_{\alpha\beta}(Y)\eta^{\alpha\alpha}\gamma^{\beta}\gamma^{\mu}
\\
& \quad + 2\omega_{\beta\mu}(X)\omega_{\alpha\beta}(Y)\eta^{\beta\beta}\gamma^{\alpha}\gamma^{\mu}
- 2\omega_{\mu\beta}(X)\omega_{\alpha\beta}(Y)\eta^{\beta\beta}\gamma^{\alpha}\gamma^{\mu}
\\
& = 8\omega_{\alpha\mu}(X)\omega_{\alpha\beta}(Y)\eta^{\alpha\alpha}\gamma^{\mu}\gamma^{\beta}
\\
& = 4\big(\omega_{\alpha\mu}(X)\omega_{\alpha\beta}(Y) - \omega_{\alpha\beta}(X)\omega_{\alpha\mu}(Y)\big)\gamma^{\mu}\gamma^{\beta}
\\
& = 4\omega_{\mu\alpha}\wedge\omega_{\beta}^{\ \alpha}(X,Y)\gamma^{\mu\beta}
\endaligned
$$
Hence, we find
$$
\aligned
(\nabla_X\nabla_Y - \nabla_Y\nabla_X)\psi 
& = [X,Y](\psi) + \frac{1}{4} \omega_{\alpha\beta}([X,Y]) \gamma^{\alpha\beta} \psi 
+ \frac{1}{4} \diff\omega_{\alpha\beta}(X,Y) \gamma^{\alpha\beta} \psi 
\\
& \quad + \frac{1}{16} \Big(
\omega_{\mu\nu}(X)\gamma^{\mu\nu} \omega_{\alpha\beta}(Y)\gamma^{\alpha\beta}
- \omega_{\mu\nu}(Y)\gamma^{\mu\nu} \omega_{\alpha\beta}(X)\gamma^{\alpha\beta}
\Big)
\\
& = \nabla_{[X,Y]}\psi + \frac{1}{4} \diff\omega_{\alpha\beta}(X,Y) \gamma^{\alpha\beta} \psi 
+\frac{1}{4} \omega_{\alpha\gamma} \wedge \omega_\beta{}^{\gamma}(X,Y) \gamma^{\alpha\beta} \psi
\\
& = \nabla_{[X,Y]}\psi + \frac{1}{4} g(R(X,Y)e_\alpha, e_\beta) \gamma^{\alpha\beta} \psi,
\endaligned
$$
which is the desired identity. 

}

%=============================================================================

\section{The uniformly-spacelike property}
\label{section=N22}

\subsection{Some linear algebra}

We first establish the following result for preparation.

\begin{lemma}\label{lem1-14-june-2025}
When $|H|\leq \eps_s$ with $\eps_s$ sufficiently small, 
\begin{equation}
g(\delb_r,\delb_r) = \zeta^2 - H^{\Ncal 00} - \frac{2\zeta^2}{1+\delb_rT}(H_{00} + (x^a/r)H_{0a})
+ \frac{\zeta^4}{(1+\delb_rT)^2}H_{00}
 + O(|H|^2).
\end{equation}
\end{lemma}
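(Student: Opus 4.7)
The plan is a direct expansion followed by a short algebraic identity. First I would compute the left-hand side from the definitions. Using $\delb_r = (x^a/r)\delb_a = \delb_rT\,\del_t + (x^a/r)\del_a$ together with $g_{\alpha\beta} = \eta_{\alpha\beta} + H_{\alpha\beta}$, the Minkowski contribution produces $1 - |\delb_rT|^2 = \zeta^2$, and collecting the $H$-linear terms gives
\begin{equation}
g(\delb_r,\delb_r) = \zeta^2 + (\delb_rT)^2\, H_{00} + 2(x^a/r)\delb_rT\, H_{0a} + (x^ax^b/r^2)\, H_{ab}.
\end{equation}
For the right-hand side I would expand $H^{\Ncal 00} = g(\diff t - \diff r, \diff t - \diff r)$ via the Neumann series for the inverse metric supplied by Lemma~\ref{lem1-16-june-2025}. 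Since $\eta(\diff t - \diff r, \diff t - \diff r) = 0$, only the perturbative part survives, and the sign rules $H^{00} = -H_{00}$, $H^{0a} = H_{0a}$, $H^{ab} = -H_{ab}$ (all modulo $O(|H|^2)$) immediately yield
\begin{equation}
-H^{\Ncal 00} = H_{00} + 2(x^a/r)H_{0a} + (x^ax^b/r^2)H_{ab} + O(|H|^2).
\end{equation}
In particular, the transverse $(x^ax^b/r^2)H_{ab}$ contribution of $-H^{\Ncal 00}$ already matches the corresponding term in $g(\delb_r,\delb_r)$.

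It then remains to verify the identity in the time–radial sector. Setting $\tau := \delb_rT$, $A := H_{00}$, and $B := (x^a/r)H_{0a}$, and exploiting the factorization $\zeta^2 = (1-\tau)(1+\tau)$, one has $\zeta^2/(1+\tau) = 1-\tau$ and $\zeta^4/(1+\tau)^2 = (1-\tau)^2$. The lemma therefore reduces to the purely algebraic statement
\begin{equation}
\tau^2 A + 2\tau B \;=\; A + 2B - 2(1-\tau)(A+B) + (1-\tau)^2\, A,
\end{equation}
which is checked by direct expansion: the $A$-coefficients on the right collapse to $(1 - 2 + 2\tau + 1 - 2\tau + \tau^2)A = \tau^2 A$, and the $B$-coefficients to $(2 - 2 + 2\tau)B = 2\tau B$.

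The argument is essentially computational, so there is no serious obstacle. The only point requiring care is the bookkeeping of signs when raising indices, since $H_{0a}$ and $H^{0a}$ carry the same sign while $(H_{00},H^{00})$ and $(H_{ab},H^{ab})$ differ by a sign; this is precisely what makes the three index types recombine cleanly into the single expression $-H^{\Ncal 00}$. The uniform $O(|H|^2)$ remainder is controlled by the smallness assumption $|H|\leq \eps_s$, which guarantees the convergence of the Neumann expansion used from Lemma~\ref{lem1-16-june-2025}.
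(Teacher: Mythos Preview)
Your proof is correct and follows essentially the same computational path as the paper. The paper organizes the calculation slightly differently: rather than expanding both sides in Cartesian components and then verifying your algebraic identity post hoc, it writes $\delb_r = (\del_t+\del_r) - (1-\delb_rT)\del_t$ and expands $H(\delb_r,\delb_r)$ bilinearly, so that $H(\del_t+\del_r,\del_t+\del_r) = -H^{\Ncal 00}+O(|H|^2)$ appears directly and the remaining cross terms already carry the factors $(1-\delb_rT) = \zeta^2/(1+\delb_rT)$ and $(1-\delb_rT)^2$ without any further algebra.
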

\begin{proof}
\begin{equation}\label{eq2-02-oct-2025(l)}
\aligned
H^{\Ncal 00} & =  g(\diff t - \diff r, \diff t - \diff r)
= g^{00} - 2\frac{x^a}{r}g^{0a} + \frac{x^ax^b}{r^2}g^{ab}
\\
& = H^{00} - 2\frac{x^a}{r}H^{0a} + \frac{x^ax^b}{r^2}H^{ab}
\\
& = -H_{00} - \frac{x^a}{r}H_{0a} - \frac{x^ax^b}{r^2}H_{ab} + O(|H|^2)
\\
& = -H(\del_t+\del_r,\del_t+\del_r) + O(|H|^2).
\endaligned
\end{equation}
$$
\aligned
g(\delb_r,\delb_r) & =  g\big(\delb_rT\del_t+\del_r, \delb_rT\del_t+\del_r\big)
\\
& =  \zeta^2 + H\big(\delb_rT\del_t+\del_r, \delb_rT\del_t+\del_r\big)
\\
& = 
\zeta^2 + H(\del_t+\del_r,\del_t+\del_r) - 2(1- \delb_rT)H(\del_t,\del_t+\del_r) + (1- \delb_rT)^2H(\del_t,\del_t)
\\
& = \zeta^2 -H^{\Ncal 00} + O(|H|^2) - 2(1- \delb_rT)H(\del_t,\del_t+\del_r) + (1- \delb_rT)^2H(\del_t,\del_t).
\endaligned
$$
\end{proof}

\begin{corollary}\label{cor1-14-june-2025}
Suppose that
\begin{equation}
H^{\Ncal 00}<0
\end{equation}
and for some sufficiently small $\eps_s$,
\begin{equation}
|H|\leq \eps_s\zeta,
\end{equation}
then one has
\begin{equation}
g(\delb_r,\delb_r)>\frac{3}{4}\zeta^2 + |H^{\Ncal 00}|.
\end{equation}
\end{corollary}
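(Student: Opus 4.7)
\medskip

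\noindent\textbf{Proof plan for Corollary~\ref{cor1-14-june-2025}.} The plan is to start from the explicit expansion of $g(\delb_r,\delb_r)$ provided by Lemma~\ref{lem1-14-june-2025} and show that, under the two standing assumptions, the perturbative remainder is strictly smaller in absolute value than $\tfrac{1}{4}\zeta^{2}$. Concretely, since by hypothesis $H^{\Ncal00}<0$ we rewrite $-H^{\Ncal00}=|H^{\Ncal00}|$, so the identity from Lemma~\ref{lem1-14-june-2025} reads
\[
g(\delb_r,\delb_r) \;=\; \zeta^{2} \;+\; |H^{\Ncal00}| \;+\; R_{1} \;+\; R_{2} \;+\; R_{3},
\]
with
\[
R_{1} \,=\, -\,\frac{2\zeta^{2}}{1+\delb_rT}\bigl(H_{00}+(x^{a}/r)H_{0a}\bigr),
\qquad
R_{2} \,=\, \frac{\zeta^{4}}{(1+\delb_rT)^{2}}\,H_{00},
\qquad
R_{3} \,=\, O(|H|^{2}).
\]

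Next, I would estimate each $R_{j}$ separately. The crucial observations are that $0\le \delb_rT<1$ in view of \eqref{eq1-19-march-2025}, so $1/(1+\delb_rT)\le 1$, and that $\zeta\le 1$ because $\zeta^{2}=1-|\delb_rT|^{2}$. Combined with the smallness assumption $|H|\le\eps_{s}\zeta$, these yield the elementary bounds
\[
|R_{1}| \,\le\, 2\zeta^{2}\cdot C\eps_{s}\zeta \,\le\, 2C\eps_{s}\,\zeta^{2},
\qquad
|R_{2}| \,\le\, \zeta^{4}\cdot\eps_{s}\zeta \,\le\, \eps_{s}\,\zeta^{2},
\qquad
|R_{3}| \,\le\, C'\eps_{s}^{2}\,\zeta^{2},
\]
for universal constants $C,C'$. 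Choosing $\eps_{s}$ small enough that $2C\eps_{s}+\eps_{s}+C'\eps_{s}^{2}<\tfrac{1}{4}$, we conclude
\[
g(\delb_r,\delb_r) \;\ge\; \zeta^{2} + |H^{\Ncal00}| - \tfrac{1}{4}\zeta^{2} \;=\; \tfrac{3}{4}\zeta^{2} + |H^{\Ncal00}|,
\]
which is the claimed inequality (with strictness coming from the open condition in the sign of $H^{\Ncal00}$ and the strict inequality in the choice of $\eps_{s}$).

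The only subtlety, and the sole place where care is required, is to make sure that the factors $\zeta^{2}$ and $\zeta^{4}$ appearing in $R_{1}$ and $R_{2}$ do not cancel the single power of $\zeta$ gained from $|H|\le\eps_{s}\zeta$ in the wrong direction: since $\zeta\le 1$, higher powers of $\zeta$ only help, and therefore $|R_{1}|$ and $|R_{2}|$ are genuinely controlled by $\eps_{s}\zeta^{2}$ (and not merely by $\eps_{s}\zeta$). No step requires more than this; the argument is short and purely algebraic once Lemma~\ref{lem1-14-june-2025} is in hand, so there is no real obstacle beyond choosing the absolute constant $\eps_{s}$ sufficiently small.
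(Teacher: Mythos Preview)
Your proposal is correct and is precisely the argument the paper has in mind: the corollary is stated without proof in the paper because it follows immediately from Lemma~\ref{lem1-14-june-2025} by bounding the three remainder terms against $\tfrac{1}{4}\zeta^{2}$, exactly as you do. The key inputs you use---$0\le\delb_rT<1$, $\zeta\le 1$, and $|H|\le\eps_s\zeta$---are all available, and your bounds on $R_1,R_2,R_3$ are accurate.
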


%-----------------------------------------------------------------------------------------------------------------------------

\subsection{Proof of Proposition~\ref{prop1-14-june-2025}}

We denote by 
\be
\mathcal{S}(s,r) := \big\{(t,x)| t = T(s,r), |x| = r\big\}\subset \Mcal_s
\ee
the Euclidean sphere contained in $\Mcal_s$. Let $\{(t_i,x_i)\}\in \mathcal{S}(s,r)$ be a set of points. Then for each $i$ there exists an open subset $U_i$ of $\mathcal{S}$ containing $(t_i,x_i)$ with a orthogonal frame $\{X_i,Y_i\}$ defined on it.
Then we have 
\be
(\delb_r, X_i, Y_i) = (\delb_1,\delb_2,\delb_3)A_i
\ee
with
\be
A_i = ({A_i}_a^b) =
\left(
\begin{array}{ccc}
x^1/r &*&*
\\
x^2/r &* &*
\\
x^3/r &* &*
\end{array}
\right), \qquad A_iA_i^{\mathrm{T}} = \mathrm{I}_3.
\ee
Let
$$
P_i := ({P_i}_{ab}) =
\left(
\begin{array}{ccc}
g(\delb_r,\delb_r) &g(\delb_r,X_i) &g(\delb_r ,Y_i)
\\
g(\delb_r,X_i)&g(X_i,X_i) &g(X_i,Y_i)
\\
g(\delb_r ,Y_i)&g(Y_i,X_i) &g(Y_i,Y_i)
\end{array}
\right),
\quad
\overline{g} := 
\left(
\gb_{ab}
\right),
\quad
\overline{\eta} := \left(\overline{\eta}_{ab}\right).
$$
Then
$$
P_i = A_i^{\mathrm{T}}\gb A_i
$$
with 
\begin{equation}\label{eq7-14-june-2025}
E:=\left(\begin{array}{cc}
\zeta^2 &0
\\
0 & \mathrm{I}_2
\end{array}\right)
= A_i^{\mathrm{T}}\overline{\eta} A_i.
\end{equation}
By shrinking (if necessary) a bit the open sets $U_i$, one can suppose that
$$
|{A_i}_a^b|\leq C_1
$$
with $C_1$ a universal constant
\footnote{
This can be observed as follows. First, suppose that $(t_*,x_*) = \big(\sqrt{s^2+r^2}, r,0,0\big)$. Then let $X_* = r^{-1}\del_{\theta}, Y_* = (r\cos\theta)^{-1}\del_{\varphi}$. Then in a neighborhood of $(t_*,x_*)$ on $\mathcal{S}_{s,r}$ $X,Y$ are well defined with ${A_*}_a^b$ being uniformly bounded (independent of $(s,r)$). For a general point $(t_i,x_i)\in\mathcal{S}(s,r)$, $\{X_i,Y_i\}$ and ${A_i}_a^b$ can be obtained by an Euclidean rotation. Then due to the compactness $O(\RR^3)$, the orthogonal matrices are uniformly bounded, which leads us to the uniform boundedness of  ${A_i}_a^b$.
}.
We remark that \eqref{eq7-14-june-2025} leads us to
\begin{equation}\label{eq8-14-june-2025}
\left(\begin{array}{cc}
\zeta^2+{a_i}_{11} &w_i^{\mathrm{T}}
\\
w_i & \mathrm{I}_2 + D_i
\end{array}\right)
:=
E + A_i^{\mathrm{T}}\big(\Hb_{ab}\big)A_i
= 
A_i^{\mathrm{T}}\gb A_i.
\end{equation}
Thus
\begin{equation}\label{eq12-14-june-2025}
|{a_i}_{11}| + |w_i| + |D_i|\lesssim |\Hb|\leq C_2\eps_s\zeta
\end{equation}
with $C_2$ a universal constant. When \eqref{eq1-14-june-2025} and \eqref{eq2-14-june-2025} hold, Corollary~\ref{cor1-14-june-2025} tell us that
\begin{equation}\label{eq11-14-june-2025}
{P_i}_{11} = \zeta^2 + {a_i}_{11} >\frac{3}{4}\zeta^2+|H^{\Ncal 00}|.
\end{equation}
More precisely, Lemma~\ref{lem1-14-june-2025},
\begin{equation}\label{eq2-29-sept-2025}
\aligned
{P_i}_{11} & =  \zeta^2 + |H^{\Ncal 00}| - \frac{2\zeta^2}{1+\delb_rT}(H_{00} + (x^a/r)H_{0a})
+ \frac{\zeta^4}{(1+\delb_rT)^2}H_{00} + O(|H|^2)
\\
=:& \zeta^2 + |H^{\Ncal 00}| + \rho_i(H)
\endaligned
\end{equation}
where we have applied \eqref{eq1-14-june-2025}, i.e., $H^{\Ncal00}<0$. Moreover, thanks to \eqref{eq2-14-june-2025}.
\begin{equation}\label{eq1-15-june-2025}
|\rho_i(H)|\lesssim \zeta^2|H| + |H|^2\lesssim \zeta^2\eps_s.
\end{equation}
Thus \eqref{eq12-14-june-2025} leads us to
\begin{equation}\label{eq9-14-june-2025}
\left(\begin{array}{cc}
1 &0
\\
-{P_i}_{11}^{-1}w_i &1
\end{array}\right)
A_i^{\mathrm{T}}\gb A_i
\left(\begin{array}{cc}
1 & \quad - {P_i}_{11}^{-1}w_i^{\mathrm{T}}
\\
0 &1
\end{array}\right)
=
\left(\begin{array}{cc}
{P_i}_{11} &0
\\
0 & \mathrm{I}_2 + D_i - {P_i}_{11}^{-1}w_iw_i^{\mathrm{T}} 
\end{array}\right).
\end{equation}
Here remark that \eqref{eq8-14-june-2025} leads us to $|D_i - {P_i}_{11}^{-1}w_iw_i^{\mathrm{T}}|\lesssim \eps_s$. Then when $\eps_s$ sufficiently small, the determinant of the right-hand side of \eqref{eq9-14-june-2025} is located in 
$$
\Big[\frac{\zeta^2}{2}+|H^{\Ncal00}|, \frac{3}{2}\zeta^2+|H^{\Ncal00}|\Big].
$$ 
Thus by the uniform boundness of $A_i$,
\begin{equation}
C_0(\zeta^2+|H^{\Ncal00}|\geq\det(\gb))\geq c_0\big(\zeta^{2} + H^{\Ncal 00}\big).
\end{equation}
on each $U_i\subset \mathcal{S}(s,r)$. Then by the compactness of $\mathcal{S}(s,r)$, the above estimate holds on the entire $\mathcal{S}(s,r)$.
This completes the Proposition~\ref{prop1-14-june-2025}.

%----------------------------------------------------------------------------------------------------------------------

\subsection{Proof of Proposition~\ref{prop1-27-june-2025}}\label{subsec1-29-sept-2025}

We first establish a technical result.

\begin{lemma}\label{lem1-15-june-2025}
Let $M$ be a real matrix of size $3\times3$, which can be written as
\begin{equation}
M=
\left(
\begin{array}{cc}
a & \zeta^{-1}w^{\mathrm{T}}
\\
\zeta v &D
\end{array}\right)
\end{equation}
where $a\in\RR$, $v,w\in\RR^2$ and $D\in \mathrm{M}_{2\times 2}(\RR)$. Suppose that
\begin{equation}
|a| + |v| + |w| + |D|\leq \eps_s
\end{equation}
with $\eps_s$ sufficiently small, and $0<\zeta$. Then the series 
\begin{equation}
T := \sum_{k=2}^{\infty}(-1)^kM^k
\end{equation}
converges. Furthermore, let 
$$
T = \left(
\begin{array}{cc}
\alpha & \zeta^{-1}\omega^{\mathrm{T}}
\\
\zeta \upsilon & \Delta
\end{array}\right)
$$
with $\alpha\in\RR$, $\upsilon,\omega\in\RR^2$ and $\Delta\in\mathrm{M}_{2\times 2}(\RR)$, then
\begin{equation}\label{eq3-15-june-2025}
|\alpha| + |\upsilon| + |\omega| + |\Delta|\lesssim \eps_s^2.
\end{equation}
\end{lemma}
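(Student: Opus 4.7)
The key structural observation is that the block decomposition of $M$ is preserved under matrix multiplication, and the $\zeta$ factors cancel in a very specific way. A direct computation shows that if $M_1, M_2$ both have the form
$$M_i = \begin{pmatrix} a_i & \zeta^{-1} w_i^{\mathrm{T}} \\ \zeta v_i & D_i \end{pmatrix},$$
then their product can be written in the same structural form
$$M_1 M_2 = \begin{pmatrix} a_1 a_2 + w_1^{\mathrm{T}} v_2 & \zeta^{-1}(a_1 w_2^{\mathrm{T}} + w_1^{\mathrm{T}} D_2) \\ \zeta(a_2 v_1 + D_1 v_2) & v_1 w_2^{\mathrm{T}} + D_1 D_2 \end{pmatrix},$$
where the factor $\zeta$ from the lower-left block of $M_1$ cancels with the factor $\zeta^{-1}$ from the upper-right block of $M_2$ (and vice-versa) in the diagonal positions, so that the new entries $v', w', \alpha', D'$ are all $\zeta$-independent combinations of the previous entries.

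Motivated by this, I would introduce the scale-invariant quantity $\|M\|_\zeta := |a| + |v| + |w| + |D|$, and show from the explicit product formula above that this quantity is submultiplicative up to a universal constant:
$$\|M_1 M_2\|_\zeta \lesssim \|M_1\|_\zeta \, \|M_2\|_\zeta.$$
This is the main algebraic step and is essentially a bookkeeping exercise tracking that no $\zeta$'s remain in any of the four blocks of the product. By induction on $k$, this then yields $\|M^k\|_\zeta \leq C^{k-1} \eps_s^k$ for some universal constant $C$ independent of $\zeta$.

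Once this bound is in hand, convergence of the series is immediate: for $\eps_s$ small enough that $C \eps_s < 1/2$, the tail starting at $k=2$ is controlled by
$$\|T\|_\zeta \leq \sum_{k=2}^{\infty} \|M^k\|_\zeta \leq \sum_{k=2}^{\infty} C^{k-1} \eps_s^k = \frac{C \eps_s^2}{1 - C \eps_s} \lesssim \eps_s^2,$$
which, upon unpacking the definition of $\|\cdot\|_\zeta$, is precisely \eqref{eq3-15-june-2025}.

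The only real subtlety (though not an essential obstacle) lies in verifying the block-multiplication formula without losing track of where the $\zeta$'s appear; once that computation is written out carefully, the rest is a standard Neumann-type geometric argument. In particular, no delicate analysis is needed — the entire statement hinges on exploiting the scaling symmetry built into the block structure of $M$, so that the formally $\zeta$-dependent expression $\sum_{k\geq 2}(-1)^k M^k$ admits bounds that are uniform in $\zeta$.
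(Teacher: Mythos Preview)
Your proposal is correct and follows essentially the same approach as the paper: the paper writes $M^k$ in the same block form, derives the identical recursion $a_{k+1} = a a_k + w^{\mathrm{T}} v_k$, etc., and proves by induction that $\max\{|a_k|,|v_k|,|w_k|,|D_k|\}\le 2^{k-1}\eps_s^k$, which is exactly your submultiplicativity bound $\|M^k\|_\zeta \le C^{k-1}\eps_s^k$ phrased blockwise rather than via a norm. The only cosmetic difference is that the paper tracks each of the four blocks separately with a $\max$, whereas you package them into the single quantity $\|\cdot\|_\zeta$; the underlying cancellation of $\zeta$'s and the geometric summation are identical.
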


\begin{proof}
We set
$$
M^k = 
\left(\begin{array}{cc}
a_k & \zeta^{-1}w_k^{\mathrm{T}}
\\
\zeta v_k &D_k
\end{array}\right).
$$
Then
\begin{equation}\label{eq2-15-june-2025}
\aligned
&a_{k+1} = aa_k + w^{\mathrm{T}}v_k,
\quad
&&w_{k+1}^{\mathrm{T}} = aw_k^{\mathrm{T}} + w^{\mathrm{T}}D_k.
\\
&v_{k+1} =  a_kv + Dv_k,
\quad
&&D_{k+1} = vw_k^{\mathrm{T}} + DD_k.
\endaligned
\end{equation}
We consider the convergence of $\{v_k\}, \{w_k\}$ in $L^{\infty}$ norm in $\RR^2$, and $\{D_k\}$ in the norm of $\mathrm{End}(\RR^2)$ with respect to the $L^{\infty}$ norm in $\RR^2$. We observe that
\begin{equation}\label{eq6-27-june-2025}
\max\{|a_k|, |v_k|_{\infty}, |w_k|_{\infty} , \|D_k\|_{\infty}\}\leq 2^{k-1}\eps_s^k.
\end{equation}
This can be checked by induction via \eqref{eq2-15-june-2025} and the property
$$
|v^{\mathrm{T}}w|_{\infty}, |vw^{\mathrm{T}}|_{\infty}\leq |v|_{\infty}|w|_{\infty}.
$$
This leads us to the convergence of $\sum_{k=1}^{\infty}|M|_{\infty}^k$ and $\sum_{k=1}^{\infty}(-1)^kM^k$. Furthermore, the bounds of the sub-matrices of $T$ can be estimated directly. For example for $\omega$,
$$
|\omega|_{\infty}\leq \sum_{k=2}^{\infty}|w_{k}|_{\infty}\leq \sum_{k=2}^{\infty}2^{k-1}\eps_s^k\lesssim C\eps_s^2.
$$
This concludes the desired estimate \eqref{eq3-15-june-2025}.

\end{proof}

\begin{proof}[Proof of Proposition~\ref{prop1-27-june-2025}]
We will make estimate on each $U_i$ and the apply the compactness. We define
$$
R_i = R_i(H) = 
\left(
\begin{array}{cc}
\rho_i(H) &w_i^{\mathrm{T}}
\\
w_i & D_i
\end{array}
\right),
\quad
F = 
\left(
\begin{array}{cc}
\zeta^{-2}+|H^{\Ncal00}| & 0
\\
0 & \mathrm{I}_2
\end{array}\right).
$$
Then \eqref{eq8-14-june-2025} (recall \eqref{eq2-29-sept-2025}) is written as
\begin{equation}\label{eq6-13-july-2025}
F + R_i = A_i^{\mathrm{T}}\gb A_i
\end{equation}
which leads us to
\begin{equation}\label{eq4-15-june-2025}
\big(A_i^{\mathrm{T}}\gb A_i\big)^{-1}
=
F^{-1} + \sum_{k=1}^{\infty}(-F^{-1}R_i)^kF^{-1},
\end{equation}
provided that the right-hand side converges. To see this, we remark that,
$$
F^{-1}R_i = 
\left(\begin{array}{cc}
\frac{\rho_i(H)}{\zeta^2+|H^{\Ncal00}|} & \frac{w_i^{\mathrm{T}}}{\zeta^2+|H^{\Ncal00}|}
\\
w_i & D_i
\end{array}\right).
$$
Then thanks to \eqref{eq1-15-june-2025} and \eqref{eq2-14-june-2025}
$$
\aligned
& \frac{\big|\rho_i(H)\big|}{\zeta^2+|H^{\Ncal00}|}\lesssim \zetab^{-2}(|H|^2+\zeta^2|H|)\lesssim \zetab^{-1}|H|\lesssim\eps_s,
\\ 
& \frac{\big|w_i^{\mathrm{T}}\big|}{\zeta^2+|H^{\Ncal00}|}\lesssim \zetab^{-2}|H|\lesssim \zetab^{-1}(\zetab^{-1}|H|)\lesssim \zetab^{-1}\eps_s,
\\
&|w_i| \lesssim |H|\leq \zetab(\zetab^{-1}|H|)\lesssim \zetab\eps_s, 
\quad 
&&|D_i|\lesssim |H|\leq \zetab^{-1}|H|\lesssim \eps_s.
\endaligned
$$
Then the desired convergence is guaranteed by Lemma~\ref{lem1-15-june-2025}. We remark particularly that
$$
F^{-1}R_iF^{-1} = 
\left(
\begin{array}{cc}
\frac{\rho_i(H)}{(\zeta^2 + |H^{\N00}|)^2} & \frac{w_i^{\mathrm{T}}}{\zeta^2+|H^{\Ncal00}|}
\\
\frac{w_i}{\zeta^2+|H^{\Ncal00}|} &D_i
\end{array}
\right).
$$
That is, 
\begin{equation}\label{eq8-27-june-2025}
\big|F^{-1}R_iF^{-1}\big|\lesssim (\zeta^2+|H^{\Ncal00})^{-2}(\zeta^2|H| + |H|^2) 
+ (\zeta^{2}+|H^{\Ncal00}|)^{-1}|H|\lesssim \zetab^{-2}|H|.
\end{equation}
Moreover, we have the following estimate on the matrix $ \sum_{k=2}^{\infty}(-F^{-1}R_i)^k$:
$$
\sum_{k=2}^{\infty}(-F^{-1}R_i)^k = :
\left(
\begin{array}{cc}
\alpha_i & \zetab^{-1}\omega_i^{\mathrm{T}}
\\
\zetab\upsilon_i & \Delta_i
\end{array}\right),
$$
then in Lemma~\ref{lem1-15-june-2025}, we fix $\eps_s = \zetab^{-1}|H|$ and obtain 
$$
|\alpha_i| + |\upsilon_i| + |\omega_i| + |\Delta_i|\lesssim \zetab^{-2}|H|^2.
$$
We thus obtain, provided that $\zeta^{-1}|H|$ sufficiently small (compared with $1$),
\begin{equation}\label{eq7-27-june-2025}
\Big|\sum_{k=1}^{\infty}(-F^{-1}R_i)^kF^{-1}\Big|\lesssim \zetab^{-2}|H|.
\end{equation}

Based on \eqref{eq4-15-june-2025}, we observe that
\begin{equation}
\sigmab  = Q - M_1Q + M_2Q
\end{equation}
where
$$
Q=:A_iF^{-1}A_i^{-1}, \qquad M_1 = A_iF^{-1}R_iF^{-1}A_i^{-1},
\quad
M_2 = A_i\sum_{k=2}^{\infty}(-F^{-1}R_i)^kF^{-1}A_i^{-1}.
$$
Then by the uniform bounds on $A_i$ and $A_i^{-1}$ together with \eqref{eq8-27-june-2025} and \eqref{eq7-27-june-2025},
\begin{equation}
|MQ| = \big|(-M_1+M_2)Q\big|\lesssim \zetab^{-2}|H|.
\end{equation}

Finally, we give the expression of $Q$. Recall \eqref{eq7-14-june-2025}, and remark that
$$
F^{-1} - E^{-1} = 
\left(\begin{array}{ccc}
\frac{-|H^{\Ncal00}|}{\zeta^2(\zeta^2+|H^{\Ncal00}|)} &0 &0
\\
0 &0 &0
\\
0 &0 &0
\end{array}\right).
$$
Then,
$$
\aligned
Q - \sigmab_{\eta} & = 
\left(\begin{array}{ccc}
x^1/r &* &*
\\
x^2/r &* &*
\\
x^3/r &* &*
\end{array}\right)
\left(\begin{array}{ccc}
\frac{-|H^{\Ncal00}|}{\zeta^2(\zeta^2+|H^{\Ncal00}|)} &0 &0
\\
0 &0 &0
\\
0 &0 &0
\end{array}\right)
\left(\begin{array}{ccc}
x^1/r &x^2/r &x^3/r
\\
* &* &*
\\
* &* &*
\end{array}\right)
\\
& = \frac{-|H^{\Ncal00}|}{\zeta^2(\zeta^2+|H^{\Ncal00}|)}\Big(\frac{x^ax^b}{r^2}\Big)_{ab}.
\endaligned
$$
\end{proof}

%---------------------------------------------------------------------------------------------------------------------

\subsection{An estimate on the eigenvalue of $\big(\gb_{ab}\big)$}

\begin{lemma}\label{lem1-13-july-2025}
Assume \eqref{eq-USA-condition} with sufficiently small $\eps_s$.
Then there exists two positive constants $K_0<K_1$ such that for any $(X^1,X^2,X^3)\in\RR^3$,
\begin{equation}
K_0(\zeta^2+|H^{\Ncal00}|)\max_a\{|X^a|^2\}\leq X^a\gb_{ab}X^b\leq K_1\max_a\{|X^a|^2\}.
\end{equation}
\end{lemma}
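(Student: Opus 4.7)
The upper bound is essentially immediate. Writing $\gb_{ab}=\overline{\eta}_{ab}+\Hb_{ab}$ and invoking the explicit expression from Lemma~\ref{lem1-13-june-2025} together with the estimate $|\Hb|\lesssim |H|\lesssim \eps_s\zeta\lesssim 1$ from \eqref{eq-USA-condition} and \eqref{eq3-14-july-2025}, I would bound $X^a\gb_{ab}X^b\le C|X|^2\le 3C\max_a|X^a|^2$, giving the constant $K_1$.

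For the lower bound, the plan is to re-use the block decomposition carried out in the proof of Proposition~\ref{prop1-14-june-2025}. At any fixed point $(t,x)\in\Mcal_s$, pick a local Euclidean-orthonormal frame $\{X_i,Y_i\}$ of the sphere $\mathcal{S}(s,r)$ so that $(\delb_r,X_i,Y_i)=(\delb_1,\delb_2,\delb_3)A_i$ with $A_i\in O(3)$ of uniformly bounded components. Then \eqref{eq8-14-june-2025} together with \eqref{eq2-29-sept-2025} gives
\begin{equation*}
A_i^{\mathrm{T}}\gb A_i
=\begin{pmatrix} P_{i,11} & w_i^{\mathrm{T}} \\ w_i & \mathrm{I}_2+D_i\end{pmatrix},
\qquad P_{i,11}=\zeta^2+|H^{\Ncal00}|+\rho_i(H),
\end{equation*}
where $|\rho_i(H)|\lesssim\eps_s\zeta^2$ and $|w_i|+|D_i|\lesssim\eps_s\zeta$. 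The key quantitative point is that, since $P_{i,11}\ge\tfrac12(\zeta^2+|H^{\Ncal00}|)\ge\tfrac12\zeta^2$ for $\eps_s$ small, one has $|w_i|^2\lesssim\eps_s^2\zeta^2\lesssim\eps_s^2 P_{i,11}$, so the cross term can be absorbed without losing the small factor $\zeta^2+|H^{\Ncal00}|$ in the coefficient of $\xi_1^2$.

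Concretely, for $\xi=(\xi_1,\xi_2')\in\RR\times\RR^2$ I would expand
\begin{equation*}
\xi^{\mathrm{T}}(A_i^{\mathrm{T}}\gb A_i)\xi
=P_{i,11}\,\xi_1^2+2\xi_1\,w_i\!\cdot\!\xi_2'+\xi_2'^{\mathrm{T}}(\mathrm{I}_2+D_i)\xi_2',
\end{equation*}
apply Cauchy--Schwarz with parameter $\delta=\tfrac14$ to the cross term, use the bound $|w_i|^2/P_{i,11}\lesssim\eps_s^2$, and use $|D_i|\le\tfrac12$ on the $2\times 2$ block. This yields $\xi^{\mathrm{T}}(A_i^{\mathrm{T}}\gb A_i)\xi\ge\tfrac12 P_{i,11}\xi_1^2+\tfrac14|\xi_2'|^2\ge c\,P_{i,11}|\xi|^2$, where the last inequality uses $P_{i,11}\lesssim 1$. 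Setting $\xi=A_i^{\mathrm{T}}X$ and using that $A_i$ is orthogonal, so $|\xi|=|X|\ge\max_a|X^a|$, and that $P_{i,11}\simeq\zeta^2+|H^{\Ncal00}|$, this delivers the stated lower bound with a universal constant $K_0$. The only subtle step---and the one to handle carefully---is the balance in the cross term argument ensuring that $P_{i,11}\xi_1^2$ (and not merely $\zeta^2\xi_1^2$) survives after absorption; all remaining estimates are routine.
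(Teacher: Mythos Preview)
Your argument is correct and follows essentially the same strategy as the paper: both use the block decomposition $A_i^{\mathrm{T}}\gb A_i$ from the proof of Proposition~\ref{prop1-14-june-2025} and exploit that the off-diagonal part $w_i$ is small relative to $\sqrt{P_{i,11}}$. The paper packages this via the rescaling $K=\mathrm{diag}\big((\zeta^2+|H^{\Ncal00}|)^{-1/2},\mathrm{I}_2\big)$, reducing to $K^{\mathrm{T}}(F+R_i)K=\mathrm{I}_3+J$ with $|J|\lesssim\eps_s$, while you achieve the same thing by Cauchy--Schwarz with a parameter; the two are equivalent linear-algebra moves.

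One omission: the identities \eqref{eq8-14-june-2025} and \eqref{eq2-29-sept-2025} you invoke are derived in the paper only in the region $\{3t/4\le r\le r^{\Ecal}(s)\}$, where $H^{\Ncal00}<0$ is available, so your writing of $P_{i,11}=\zeta^2+|H^{\Ncal00}|+\rho_i$ is only justified there. The paper treats the complementary region $\{r<3t/4\}\cup\{r>r^{\Ecal}(s)\}$ separately, observing that $\zeta\ge\sqrt{7}/4$ there, hence $\zeta^2+|H^{\Ncal00}|\simeq 1$ and the desired estimate reduces to the trivial bound $X^a\gb_{ab}X^b\simeq|X|^2$ obtained directly from $\gb_{ab}=\overline{\eta}_{ab}+\Hb_{ab}$ with $|\overline{\eta}_{ab}-\delta_{ab}|\le(\delb_rT)^2\le 9/16$ and $|\Hb|\lesssim\eps_s$. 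You should add this short complementary case.
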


\begin{proof}
\bse
We treat first the region $\{3t/4\leq r\leq r^{\Ecal}(s)\}$. We recall \eqref{eq6-13-july-2025}. Remark that $A_i$ are orthogonal matrices. We thus only need to control the eigenvalues of $(F+R_i)$. We remark that, let
\be
K = \left(
\begin{array}{cc}
(\zeta^2+|H^{N00}|)^{-1/2} &0
\\
0 & \mathrm{I}_2
\end{array}
\right).
\ee
Then
\begin{equation}
K^{\mathrm{T}}(F+R_i)K
=\mathrm{I}_3 + J
\end{equation}
with
\be
J = \left(
\begin{array}{cc}
\frac{\rho_i(H)}{\zeta^2+|H^{\N00}|} & \frac{w_i^{\mathrm{T}}}{\big(\zeta^2+|H^{\N00}|\big)^{1/2}}
\\
\frac{w_i}{\big(\zeta^2+|H^{\N00}|\big)^{1/2}} &D_i
\end{array}
\right).
\ee
Then recall ing\eqref{eq1-15-june-2025} and \eqref{eq12-14-june-2025}, we have 
\be
|J|\lesssim \eps_s.
\ee
Thus for any $X = (X^1,X^2,X^3)\in\RR^3$, 
\be
(1-C\eps)\sum_{a}|X^a|^2\leq X^{\mathrm{T}}(I_3+J)X \leq (1+C\eps_s)\sum_a|X^a|^2.
\ee
Thus for any $Y = (Y^1,Y^2,Y^3)\in\RR^3$, we find 
\be
\aligned
&(1-C\eps)(|Y^1|(\zeta^2+|H^{\Ncal00}|)+|Y^2|^2+|Y^3|^2)
\\
\leq& Y^{\mathrm{T}}(F + R_i)Y 
\\
\leq& (1+C\eps_s)(|Y^1|(\zeta^2+|H^{\Ncal00}|)+|Y^2|^2+|Y^3|^2).
\endaligned
\ee
That is, provided that $\eps_s$ sufficiently small, the eigenvalue of $(F+R_i)$ is bounded in the interval 
\begin{equation}
\frac{1}{2}(\zeta^2 + |H^{\N00}|)\sum_a|Y^a|^2\leq Y^{\mathrm{T}}(F+R_i)Y\leq \frac{3}{2}\sum_{a}|Y^a|^2
\end{equation}
Here we have remarked that $|H^{\N00}|\lesssim \eps_s$ due to \eqref{eq2-14-june-2025} and \eqref{eq1-14-july-2025}.
Then by the orthogonality of each $A_i$, one obtains the desired result in $\{3t/4\leq r\leq r^{\Ecal}(s)\}$. 
\ese
\bse
When in the region out of $\Mcal{[s_0,s_1]}\setminus \{3t/4\leq r\leq r^{\Ecal}(s)\}$, we remark that $\zeta\geq \sqrt{7}/4$. Thus in this case we only need to prove that 
\begin{equation}\label{eq2-14-july-2025}
K_0\sum_a|Y^a|^2\leq Y^a\gb_{ab}Y^b\leq K_1\sum_{a}|Y^a|^2
\end{equation}
with $K_1>K_0>0$. When $r\geq r^{\Ecal}(s)$, $\delb_rT=0$. Thus $\gb_{ab} = g_{ab} = \delta_{ab} + H_{ab}$. Provided that $|H_{ab}|\lesssim \zeta\eps_s = \eps_s$ sufficiently small, \eqref{eq2-14-july-2025} is guaranteed for $\{r\geq r^{\Ecal}(s)\}$. When $r\leq 3t/4$, we remark that
\be
\gb_{ab} = - \frac{x^ax^b}{t^2} + \delta_{ab} + \Hb_{ab}.
\ee
Then, for any $Y\in\RR^3$, we have 
\be
\Big|\sum_{a,b}\frac{x^ax^b}{t^2}Y^aY^b\Big|\leq \Big(\sum_a|Y^a|^2\Big)^{1/2}\Big(\sum_a|x^a/t|^2\Big)^{1/2}
\leq \frac{3}{4}\Big(\sum_a|Y^a|^2\Big)^{1/2}.
\ee
Thus, provided that $|H|\lesssim \eps_s$ sufficiently small
\be
(1/8) \Big(\sum_a|Y^a|^2\Big)^{1/2}\leq Y^aY^b\gb_{ab}\leq 2\Big(\sum_a|Y^a|^2\Big)^{1/2}.
\ee
\ese
\end{proof}

%==============================================================================

\section{High-order derivatives acting on products}
\label{section=N23}

\subsection{Estimates on high-order Lie derivatives}

For $I,J$ two multi-indices. If there exists a multi-index $K$ such that $J\odot K=I$, we say $J$ a sub-index of $I$, and we denote by
\be
J\prec I.
\ee
It is clear that when $J\prec I$, one has $\ord(J)\leq\ord(I)$, $\rank(I)\leq \rank(J)$.
For convenience in the discussion, we introduce the notation
\be
|X|_{\vec{n},p,k} := \max_{\ord(I)\leq p\atop \rank(I)\leq k}|\Lcal_{\mathscr{Z}}^IX|_{\vec{n}},
\quad
|X|_{\vec{n},p} := |X|_{\vec{n},p,p}.
\ee
Then we establish the following estimate.

\begin{proposition}\label{prop1-14-july-2025}
Assume that \eqref{eq-USA-condition} holds for a sufficiently small $\eps_s$. Let $X$ be a vector field defined in a (subset) of $\Mcal_{[s_0,s_1]}$. Then for $0\leq k\leq p$,
\begin{equation}\label{eq10-14-july-2025}
|X|_{\vec{n},p,k}\lesssim_p \zetab^{-1}|X|_{p,k}.
\end{equation}
\end{proposition}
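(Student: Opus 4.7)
The strategy is to reduce the intrinsic norm $|\Lcal_{\mathscr{Z}}^I X|_{\vec n}$ to the component-based norm $|X|_{p,k}$ by first deriving an algebraic expansion of iterated Lie derivatives of $X$ in the natural frame, and then invoking the zero-order estimate \eqref{eq8-14-july-2025} on the basis vectors $\del_\alpha$.

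First, I would establish the basic structural identity: for any admissible $Z = Z^\gamma\del_\gamma \in \mathscr{Z}$ and any vector field $X = X^\beta\del_\beta$, one has $\Lcal_Z X = [Z,X] = Z(X^\beta)\del_\beta - X^\gamma(\del_\gamma Z^\beta)\del_\beta$. The crucial observation is that $\del_\gamma Z^\beta$ is a \emph{constant} for every $Z\in\mathscr{Z}$ (zero for translations, a Kronecker-type constant for boosts and rotations). An induction on $\ord(I)$ then yields a formula of the form
\begin{equation}
\Lcal_{\mathscr{Z}}^I X = \sum_{J \prec I'} c^I_J\,\bigl(\mathscr{Z}^J X^\beta\bigr)\,\del_\beta,
\end{equation}
where the sum ranges over finitely many multi-indices $J$ with $\ord(J)\leq\ord(I)$ and $\rank(J)\leq\rank(I)$, and the coefficients $c^I_J$ are universal constants depending only on $I$. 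Concretely, at each induction step $\Lcal_{Z'}(\Lcal_{\mathscr{Z}}^I X)$ produces two contributions: one where $Z'$ differentiates the component $\mathscr{Z}^J X^\beta$, adding the letter of $Z'$ to $J$ (increasing $\ord$ by one and $\rank$ by at most one, consistently with how $\mathscr{Z}^{I'}$'s type is formed from $Z'$ and $I$); and one where $Z'$ is differentiated against $\del_\beta$, contributing a constant multiple of the existing term with only a relabeling of $\beta$. The ``no loss'' in rank is built into the fact that $\del_\gamma Z'^\beta$ is a constant.

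Once the expansion is in hand, the conclusion is immediate. By definition of $|\cdot|_{p,k}$, the coefficient $|\mathscr{Z}^J X^\beta|$ is bounded by $|X|_{p,k}$ for every $J$ admissible in the sum. Applying Lemma~\ref{lem2-13-july-2025} ---i.e.\ the pointwise bound $|\del_\beta|_{\vec n}\lesssim\zetab^{-1}$ valid under \eqref{eq-USA-condition} for sufficiently small $\eps_s$--- to each basis vector yields
\begin{equation}
|\Lcal_{\mathscr{Z}}^I X|_{\vec n} \leq \sum_{J} |c^I_J|\,|\mathscr{Z}^J X^\beta|\,|\del_\beta|_{\vec n} \lesssim_p \zetab^{-1}|X|_{p,k},
\end{equation}
uniformly for $\ord(I)\leq p$ and $\rank(I)\leq k$, which is precisely \eqref{eq10-14-july-2025}.

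\paragraph{Principal difficulty.}

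The only non-trivial point is the bookkeeping in the induction: one must verify that the two contributions generated at each step respect the ordering $\ord(J)\leq\ord(I')$, $\rank(J)\leq\rank(I')$ simultaneously, in all four cases ($Z=\del_\alpha$ or $Z\in\{L_a,\Omega_{ab}\}$, paired with the previous step's type). This is routine but must be done carefully, since a boost $L_a$ acting on a component $X^\beta$ increases both order and rank by one, whereas the commutator term $[L_a,\del_\gamma]$ ---which corresponds to $\del_\gamma L_a^\beta$--- produces lower-order contributions with strictly smaller rank that are automatically absorbed by the $(p,k)$ range. All the non-trivial geometric content of the proposition, namely the interaction with the foliation and the light-bending condition, is entirely packaged into the zero-order estimate $|\del_\beta|_{\vec n}\lesssim\zetab^{-1}$ from Lemma~\ref{lem2-13-july-2025}.
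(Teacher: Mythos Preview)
Your proposal is correct and follows essentially the same approach as the paper: the paper first establishes the decomposition $\Lcal_{\mathscr{Z}}^IX = (\mathscr{Z}^IX^{\alpha})\del_{\alpha} + \sum_{J\prec I,\,|J|<|I|}\Gamma_{J\beta}^{\alpha}(\mathscr{Z}^{J}X^{\beta})\del_{\alpha}$ with constant coefficients (Lemma~\ref{lem4-11-july-2025}), via exactly the induction you describe based on the constancy of $\del_\gamma Z^\beta$, and then applies Lemma~\ref{lem2-13-july-2025} to conclude. Your bookkeeping remark on order and rank is the only care point, and the paper handles it in the same manner.
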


This result is based on the following decompositions.

\begin{lemma}\label{lem4-11-july-2025}
For a vector field $X$ and an admissible operator $\mathscr{Z}^I$, one has 
\begin{equation}\label{eq13-11-july-2025}
\Lcal_{\mathscr{Z}}^IX = \big(\mathscr{Z}^IX^{\alpha}\big)\del_{\alpha} 
+ \sum_{J\prec I\atop|J|<|I|}\Gamma_{J\beta}^{\alpha}(\mathscr{Z}^{J}X^{\beta})\del_{\alpha}
\end{equation}
where $\Gamma_{J\beta}^{\alpha}$ are constants. For a tensor field $T$ of type $(0,2)$, one has 
\begin{equation}\label{eq14-11-july-2025}
\Lcal_{\mathscr{Z}}^IT_{\alpha\beta} = \mathscr{Z}^I\big(T_{\alpha\beta}\big) 
+ \sum_{J\prec I\atop|J|<|I|}\Gamma_{\alpha\beta J}^{I \mu\nu} \mathscr{Z}^{J}(T_{\mu\nu})
\end{equation}
with $\Gamma_{\alpha\beta J}^{I \mu\nu}$ constants.
\end{lemma}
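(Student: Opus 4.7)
The plan is to prove both decompositions by induction on $\ord(I)$, exploiting the fundamental structural property that every admissible vector field $Z \in \mathscr{Z} = \{\del_\alpha, L_a, \Omega_{ab}\}$ is affine in the coordinates, so that $\del_\beta Z^\alpha$ are constants (and higher partial derivatives vanish). This is what ensures the coefficients $\Gamma_{J\beta}^{\alpha}$ and $\Gamma_{\alpha\beta J}^{I\mu\nu}$ remain constants at every stage.

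For the base case $|I|=1$, write $\Lcal_Z X = [Z,X] = Z(X^\alpha)\del_\alpha - X^\beta (\del_\beta Z^\alpha)\del_\alpha$, and observe that $\del_\beta Z^\alpha$ is a constant for every admissible $Z$. This matches \eqref{eq13-11-july-2025} with $J$ the empty multi-index. For $\Lcal_Z T_{\alpha\beta}$, one computes $\Lcal_Z(T_{\alpha\beta}\, dx^\alpha \otimes dx^\beta)$ using $\Lcal_Z(dx^\alpha) = d(Z^\alpha) = (\del_\gamma Z^\alpha)\, dx^\gamma$ with constant coefficients, yielding \eqref{eq14-11-july-2025} at order one.

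For the inductive step, suppose the formulas hold for every admissible operator of order $\le p$, and consider $\mathscr{Z}^{I'} = Z\mathscr{Z}^I$ of order $p+1$. Applying $\Lcal_Z$ term-by-term to the inductive expansion of $\Lcal_{\mathscr{Z}}^I X$, each piece $(u)\del_\alpha$ (with $u$ scalar) produces $Z(u)\del_\alpha - u(\del_\alpha Z^\gamma)\del_\gamma$. The first contribution absorbs the $Z$ into the scalar derivative, turning $\mathscr{Z}^I X^\alpha$ into $\mathscr{Z}^{I'} X^\alpha$ and likewise $\mathscr{Z}^J X^\beta$ into $\mathscr{Z}^{ZJ} X^\beta$ where $ZJ \prec I'$ by construction. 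The second contribution keeps the sub-index $J$ (or $I$) unchanged and multiplies by the \emph{constant} $\del_\alpha Z^\gamma$, producing a new term of the required form with sub-index still strictly shorter than $I'$. The tensor case is handled identically, by invoking $\Lcal_Z(dx^\mu) = (\del_\nu Z^\mu)dx^\nu$ to distribute the Lie derivative across the contraction with constant outputs.

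The bookkeeping is the only real obstacle: one must verify that at each step the partition $J \prec I'$ with $|J|<|I'|$ is preserved and that no non-constant coefficient ever sneaks in. The latter is guaranteed precisely because $\del_\beta Z^\alpha$ is constant for $Z \in \mathscr{Z}$, and compositions of such linear maps with integer-coefficient combinatorial factors remain constant. Therefore the induction closes, establishing both \eqref{eq13-11-july-2025} and \eqref{eq14-11-july-2025}.
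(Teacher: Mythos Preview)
Your proposal is correct and follows essentially the same approach as the paper: induction on $|I|$, with the base case handled via $\Lcal_Z X = Z(X^\alpha)\del_\alpha - X^\beta(\del_\beta Z^\alpha)\del_\alpha$ (respectively $\Lcal_Z T_{\alpha\beta} = Z T_{\alpha\beta} + T_{\gamma\beta}\del_\alpha Z^\gamma + T_{\alpha\gamma}\del_\beta Z^\gamma$), and the inductive step carried out by applying $\Lcal_Z$ term-by-term to the hypothesis for $\mathscr{Z}^I$, using throughout that $\del_\beta Z^\alpha$ are constants for admissible $Z$. The paper writes the inductive computation out explicitly as
\[
\Lcal_\mathscr{Z}^{I'}X = \mathscr{Z}^{I'}X^{\alpha}\del_{\alpha} - (\mathscr{Z}^I X^{\beta})\del_{\beta}Z^{\alpha}\del_{\alpha} + \sum_{J\prec I,\,|J|<|I|}\Gamma_{J\beta}^{\alpha}\,Z(\mathscr{Z}^{J}X^{\beta})\del_{\alpha} - \sum_{J\prec I,\,|J|<|I|}\Gamma_{J\beta}^{\gamma}(\mathscr{Z}^{J}X^{\gamma})\del_{\beta}Z^{\alpha}\del_{\alpha},
\]
which is exactly your description of the two contributions from each term $(u)\del_\alpha$.
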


\begin{proof}
This can be checked by induction on $|I|$. In fact for zero order, we remark that
\begin{equation}\label{eq12-11-july-2025}
\aligned
\, [\del_{\alpha},X] & = \, \del_{\alpha}X^{\mu}\del_{\mu},
\\
[L_a,X] & = \, L_aX^{\alpha}\del_{\alpha} - X^{\alpha}\delta_\alpha^0\del_a 
- X^{\alpha}\delta_{\alpha}^a\del_t,
\\
[\Omega_{ab},X] & = \, \Omega_{ab}X^{\alpha}\del_{\alpha} - X^{\alpha}\delta_{\alpha}^a\del_b 
+ X^{\alpha}\delta_{\alpha}^b\del_a 
\endaligned
\end{equation}
which verifies \eqref{eq13-11-july-2025}. Then for $\mathscr{Z}^{I'} = Z\mathscr{Z}^I$, we have
$$
\aligned
\Lcal_\mathscr{Z}^{I'}X 
& = \, \Lcal_Z(\Lcal_{\mathscr{Z}}^IX) = \Lcal_Z\Big(\big(\mathscr{Z}^IX^{\alpha}\big)\del_{\alpha} 
+ \sum_{J\prec I\atop|J|<|I|}\Gamma_{J\beta}^{\alpha}(\mathscr{Z}^{J}X^{\beta})\del_{\alpha}\Big)
\\
& = \, \mathscr{Z}^{I'}X^{\alpha}\del_{\alpha} 
- \big(\mathscr{Z}^IX^{\beta}\big)\del_{\beta}Z^{\alpha}\del_{\alpha}
\\
& \quad +\sum_{J\prec I\atop|J|<|I|}\Gamma_{J\beta}^{\alpha}Z(\mathscr{Z}^{J}X^{\beta})\del_{\alpha}
-
\sum_{J\prec I\atop|J|<|I|}\Gamma_{J\beta}^{\gamma}(\mathscr{Z}^{J}X^{\gamma})\del_{\beta}Z^{\alpha}\del_{\alpha}.
\endaligned
$$
Then we only need to remark that $\del_{\alpha}Z^{\beta}$ are constants.

For \eqref{eq14-11-july-2025}, we recall that
$$
\Lcal_ZT_{\alpha\beta} = ZT_{\alpha\beta} + T_{\gamma\beta}\del_{\alpha}Z^{\gamma} 
+ T_{\alpha\gamma}\del_{\beta}Z^{\gamma}.
$$
Recall that $\del_{\alpha}Z^{\beta}$ are constants, the above expression verifies \eqref{eq14-11-july-2025}. Then we consider $\mathscr{Z}^{I'} = Z\mathscr{Z}$.
$$
\aligned
\Lcal_\mathscr{Z}^{I'}T_{\alpha\beta} 
& =  \Lcal_Z(\Lcal_{\mathscr{Z}}^IT)_{\alpha\beta} 
\\
& = Z(\Lcal_{\mathscr{Z}}^IT_{\alpha\beta}) 
+ \big(\Lcal_{\mathscr{Z}}^IT\big)_{\gamma\beta}\del_{\alpha}Z^{\gamma} 
+ \big(\Lcal_{\mathscr{Z}}^IT\big)_{\alpha\gamma}\del_{\beta}Z^{\gamma} .
\endaligned
$$
We thus conclude for the case $|I'| = |I|+1$, due to the fact that $\del_{\alpha}Z^{\beta}$ being constants.
\end{proof}

\begin{proof}[Proof of Proposition~\ref{prop1-14-july-2025}]
From \eqref{eq13-11-july-2025}, considering a multi-index $I$ with $\ord(I) = p, \rank(I) = k$, one has
\begin{equation}\label{eq2-16-july-2025}
\big|\Lcal_{\mathscr{Z}}^IX^{\alpha}\big|\lesssim_p |X|_{p,k}.
\end{equation}
Then  by Lemma~\ref{lem2-13-july-2025}, we obtain \eqref{eq10-14-july-2025}.
\end{proof}
\subsection{Decomposition on high-order adapted derivatives}
{
\begin{lemma}\label{lem1-15-july-2025}
If $X$ is a vector field and $\mathscr{Z}^I$ an admissible operator, then $\mathscr{Z}^IX$ can be expressed as
\begin{equation}\label{eq7-04-july-2025}
\mathscr{Z}^IX = \Lcal_{\mathscr{Z}}^IX 
+ \sum_{J\odot K\prec I\atop |K|\geq 1} \Lcal_\mathscr{Z}^JX^{\alpha}\Lcal_{\mathscr{Z}}^KH_{\beta\gamma}
P_{JK\alpha}^{\beta\gamma\mu}\del_{\mu}
\end{equation}
where $P_{JK\alpha}^{\beta\gamma\mu}$ are polynomials of constant coefficients acting on the variables
$g^{\mu\nu}$ and $\Lcal_\mathscr{Z}^LH_{\mu\nu}$  with $|L|+|K| \leq|I|-|J|$ and $1\leq|L|\leq|K|$.
\end{lemma}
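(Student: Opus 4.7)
The plan is to argue by induction on $|I|$, with the essential algebraic input being Lemma~\ref{lem4-06-oct-2025(l)}: for every $Z\in\mathscr{Z}$ and every vector $X$, one has $\widehat{Z}X = \Lcal_ZX + \tfrac{1}{2}g^{\mu\nu}X^\alpha \pi[Z]_{\alpha\nu}\del_\mu$. Since each $Z\in\mathscr{Z}=\{\del_\alpha,L_a,\Omega_{ab}\}$ is a Killing vector of the Minkowski metric, $\Lcal_Z\eta = 0$ and consequently $\pi[Z] = \Lcal_Zg = \Lcal_ZH$. This at once establishes the base case $|I|=1$ with $J=\emptyset$, $K=(Z)$, and polynomial $P^{\beta\gamma\mu}_{\emptyset K\alpha} = \tfrac{1}{2}g^{\mu\gamma}\delta^\beta_\alpha$ containing no internal $\Lcal^LH$ factor.

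For the inductive step I would write $\mathscr{Z}^{I'} = \widehat{Z}\,\mathscr{Z}^I$ and apply $\widehat{Z}$ term-by-term to the decomposition of $\mathscr{Z}^I X$ assumed by induction. The leading piece $\widehat{Z}\Lcal_\mathscr{Z}^I X$ produces $\Lcal_\mathscr{Z}^{I'}X$ plus a correction $\tfrac{1}{2}g^{\mu\nu}(\Lcal_\mathscr{Z}^I X)^\alpha (\Lcal_ZH)_{\alpha\nu}\del_\mu$, which has exactly the stated form with $J = I$, $K=(Z)$, and no inner $\Lcal^LH$ factors. Each of the sum-type correction terms already present, namely $\Lcal_\mathscr{Z}^J X^\alpha\,\Lcal_\mathscr{Z}^K H_{\beta\gamma}\,P^{\beta\gamma\mu}_{JK\alpha}\del_\mu$, is hit by $\widehat{Z}$ and expanded through Leibniz. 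The key stability observation here is that the class of variables occurring in $P$ is closed under $\Lcal_Z$: on the inverse metric one has $\Lcal_Zg^{\mu\nu} = -g^{\mu\alpha}(\Lcal_ZH)_{\alpha\beta}g^{\beta\nu}$, which remains polynomial in $g^{\mu\nu}$ and $\Lcal^LH$; while $\Lcal_Z\Lcal_\mathscr{Z}^LH = \Lcal_\mathscr{Z}^{L'}H$ for some $|L'|=|L|+1$. Finally, the new correction $\tfrac12 g^{\mu\nu}(\mathscr{Z}^IX)^\alpha(\Lcal_ZH)_{\alpha\nu}\del_\mu$ is handled by re-expanding $(\mathscr{Z}^IX)^\alpha$ either via the induction hypothesis or via \eqref{eq13-11-july-2025}; the constant coefficients $\Gamma^\alpha_{J\beta}$ arising there are simply absorbed into the polynomials $P$.

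The principal obstacle is combinatorial rather than analytic: one must verify that the multi-index constraints $|L|+|K|\leq |I'|-|J|$ and $1\leq|L|\leq|K|$ survive every step of the inductive expansion. Each application of $\widehat{Z}$ contributes exactly one new derivative, and this derivative can be routed in only a few ways, namely (i) it enlarges $J$ by one index, (ii) it enlarges the outer $K$ by one index, (iii) it enlarges some internal $L$ by one index (while the simultaneous enlargement of $K$ in the partner terms preserves $|L|\leq|K|$), or (iv) it creates a brand-new $\Lcal^LH$ factor with $|L|=1$, arising either from $\Lcal_Zg^{\mu\nu}$ or from the new Lemma~\ref{lem4-06-oct-2025(l)} correction. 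In each case the total budget $|J|+|K|+\sum|L_j|\leq|I'|$ is respected, and the constraint $|L|\leq|K|$ for every internal factor is preserved because every new $L$ either equals $1$ or is produced in tandem with the growth of $K$. Organizing this bookkeeping cleanly, and checking that no unintended cross-terms with $|L|=0$ or $|L|>|K|$ appear, is the one step that requires care; the analytic content is otherwise trivial.
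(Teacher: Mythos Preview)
Your approach is essentially identical to the paper's: induction on $|I|$, base case via Lemma~\ref{lem4-06-oct-2025(l)} with $\pi[Z]=\Lcal_ZH$, inductive step by writing $\mathscr{Z}^{I'}=Z\mathscr{Z}^I$, splitting $\widehat{Z}$ into its Lie part plus the correction $\tfrac12 g^{\mu\nu}(\mathscr{Z}^IX)^\alpha\Lcal_ZH_{\alpha\nu}\del_\mu$, and using the closure identity $\Lcal_Zg^{\mu\nu}=-g^{\mu\mu'}g^{\nu\nu'}\Lcal_ZH_{\mu'\nu'}$ to show the polynomial class is stable.

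One point in your bookkeeping discussion is slightly off. In your case (iii), when $\Lcal_Z$ hits an internal factor $\Lcal_\mathscr{Z}^LH$ inside $P$, the resulting term has $|L|$ increased by one while $|K|$ is unchanged for \emph{that particular term}; there is no ``simultaneous enlargement of $K$ in partner terms'' that helps here, since the Leibniz expansion produces a sum of independent monomials. If originally $|L|=|K|$, the new monomial has $|L|+1>|K|$. The paper's fix is simply to \emph{relabel}: whenever a monomial of $\Lcal_Z P$ contains some $\Lcal_\mathscr{Z}^{L'}H$ with $|L'|>|K|$, swap the roles of that factor and the external $\Lcal_\mathscr{Z}^KH_{\beta\gamma}$, so that the external factor is always the one of highest order. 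This preserves $1\le|L|\le|K|$ by construction and keeps the total budget $|J|+|K|+\sum|L_j|\le|I'|$ intact.
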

\begin{proof} We use an induction on $|I|$. 
The case $|I|=1$ follows from \eqref{eq1-04-july-2025}:
$$
ZX = \Lcal_ZX + \frac{1}{2}g^{\mu\nu}X^{\alpha}\pi[Z]_{\alpha\nu}\del_{\mu} 
= \Lcal_ZX + \frac{1}{2}g^{\mu\nu}X^{\alpha}\Lcal_ZH_{\alpha\nu}\del_{\mu}
$$
where we have applied that $\Lcal_Z\eta_{\mu\nu} = 0$ for $Z\in\mathscr{Z}$. Suppose that $\mathscr{Z}^IX$ can be expressed in the form \eqref{eq7-04-july-2025}. We consider $\mathscr{Z}^{I'} = Z \mathscr{Z}^I$ and write 
$$
\aligned
Z \mathscr{Z}^IX & =  Z(\mathscr{Z}^IX) = \Lcal_{Z}\big(\Lcal_\mathscr{Z}^IX\big) 
+ \sum_{J\odot K\prec I\atop |K|\geq 1}
\Lcal_{Z}\big(\Lcal_\mathscr{Z}^JX^{\alpha}\Lcal_{\mathscr{Z}}^KH_{\beta\gamma}
P_{JK\alpha}^{\beta\gamma\mu}\del_{\mu}\big)
\\
& \quad + \frac{1}{2}g^{\mu\nu}\Lcal_\mathscr{Z}^IX^{\alpha}\pi[Z]_{\alpha\nu}\del_{\mu}
+ \frac{1}{2}g^{\mu\nu}\Big(\sum_{J\odot K\prec I\atop |K|\geq 1}
\Lcal_\mathscr{Z}^JX^{\delta}\Lcal_{\mathscr{Z}}^KH_{\beta\gamma}
P_{JK\delta}^{\beta\gamma\alpha}\Big)\pi[Z]_{\alpha\nu}\del_{\mu}, 
\endaligned
$$
therefore 
$$
\aligned
Z \mathscr{Z}^IX 
& = \Lcal_{\mathscr{Z}}^{I'}X 
+ \sum_{J\odot K\prec I\atop |K|\geq 1}
\Lcal_{Z}\Lcal_\mathscr{Z}^JX^{\alpha}\Lcal_{\mathscr{Z}}^KH_{\beta\gamma}
P_{JK\alpha}^{\beta\gamma\mu}\del_{\mu}
+ \sum_{J\odot K\prec I \atop |K|\geq 1}
\Lcal_\mathscr{Z}^JX^{\alpha}\Lcal_Z\Lcal_{\mathscr{Z}}^KH_{\beta\gamma}
P_{JK\alpha}^{\beta\gamma\mu}\del_{\mu}
\\
& \quad +  \sum_{J\odot K\prec I \atop |K|\geq 1}
\Lcal_\mathscr{Z}^JX^{\alpha}\Lcal_{\mathscr{Z}}^KH_{\beta\gamma}
\Lcal_ZP_{JK\alpha}^{\beta\gamma\mu}\del_{\mu}
\\
& \quad + \frac{1}{2}g^{\mu\nu}\Lcal_\mathscr{Z}^IX^{\alpha}\pi[Z]_{\alpha\nu}\del_{\mu}
+ \frac{1}{2}g^{\mu\nu}\sum_{J\odot K\prec I\atop |K|\geq 1}
\big(\Lcal_\mathscr{Z}^JX^{\delta}\Lcal_{\mathscr{Z}}^KH_{\beta\gamma}
P_{JK\delta}^{\beta\gamma\alpha}\big)
\pi[Z]_{\alpha\nu}\del_{\mu}.
\\
\endaligned
$$
All of the terms in the right-hand side, but the fourth one, are already in the correct form (recall that 
$\pi[Z] = \Lcal_{Z}g = \Lcal_ZH$). We only need to consider the third term and we observe that 
\begin{equation}\label{eq1-05-july-2025}
\Lcal_{Z}g^{\mu\nu} = - \pi[Z]^{\mu\nu} 
= -g^{\mu\mu'}g^{\nu\nu'}\Lcal_{Z}H_{\mu'\nu'}.
\end{equation}
We remark that $P_{JK\alpha}^{\beta\gamma\mu}$ are finite linear combinations of monomial composed by $g^{\mu\nu}$ and $\Lcal_{\mathscr{Z}}^{L}g_{\mu\nu}$ with $|J|+|K|+|L|\leq |I|$.
Then when we take the Lie derivative on each monomial. We rely on the Leibniz rule and apply \eqref{eq1-05-july-2025} when the Lie derivative acts on the factors $g^{\mu\nu}$. It may happen that a monomial of $\Lcal_{Z}(P_{JK\delta}^{\beta\gamma\alpha})$ contains a Lie derivative acting on $g_{\mu\nu}$, namely $\Lcal_{\mathscr{Z}}^Lg_{\mu\nu}$, with order larger than $|K|$. Then we exchange the role of $\Lcal_{\mathscr{Z}}^Lg_{\mu\nu}$ and $\Lcal_{\mathscr{Z}}^Kg_{\beta\gamma}$. Then we can guarantee that $|K|$ is not smaller than the order of any Lie derivative acting $g_{\mu\nu}$ contained in the corresponding monomial. finally, recall that for $|K|\geq 1$, $\Lcal_{\mathscr{Z}}^Kg = \Lcal_{\mathscr{Z}}^KH$. This completes the proof.
\end{proof} 
}

Then we establish the following estimate:
\begin{lemma}\label{lem2-23-july-2025}
Assume that \eqref{eq-USA-condition} holds for a sufficiently small $\eps_s$, and suppose that
\begin{equation}\label{eq3-15-july-2025}
|H|_{[p/2]}\leq 1.
\end{equation}
Let $X$ be a vector field and $\mathscr{Z}^I$ be an admissible operator of type $(p,k)$. Then
\begin{equation}\label{eq5'-16-july-2025}
|\mathscr{Z}^IX|_{\vec{n}}\lesssim \zetab^{-1}|X|_{p,k} 
+ \sum_{{p_1+p_2\leq p\atop k_1+k_2\leq k}\atop p_2\geq 1}\zetab^{-1}|X|_{p_1,k_1}|H|_{p_2,k_2}.
\end{equation}
\end{lemma}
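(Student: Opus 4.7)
The proof starts from the explicit decomposition
$$\mathscr{Z}^I X \;=\; \Lcal_{\mathscr{Z}}^I X \;+\; \sum_{J\odot K\prec I\atop |K|\geq 1}\Lcal_{\mathscr{Z}}^J X^{\alpha}\,\Lcal_{\mathscr{Z}}^K H_{\beta\gamma}\,P_{JK\alpha}^{\beta\gamma\mu}\,\del_{\mu}$$
provided by Lemma~\ref{lem1-15-july-2025}, where each coefficient $P_{JK\alpha}^{\beta\gamma\mu}$ is a polynomial (with constant coefficients) in $g^{\mu\nu}$ and in $\Lcal_{\mathscr{Z}}^L H_{\mu\nu}$ with $1\leq |L|\leq |K|$ and $|L|+|K|\leq |I|-|J|$. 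For the leading term, Proposition~\ref{prop1-14-july-2025} gives directly $|\Lcal_{\mathscr{Z}}^I X|_{\vec{n}}\lesssim \zetab^{-1}|X|_{p,k}$, which accounts for the first contribution on the right-hand side of \eqref{eq5'-16-july-2025}.

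Each term of the sum has the form $V = V^{\mu}\del_{\mu}$, so Lemma~\ref{lem2-13-july-2025} reduces us to estimating its coordinate functions
$V^{\mu} = \Lcal_{\mathscr{Z}}^J X^{\alpha}\,\Lcal_{\mathscr{Z}}^K H_{\beta\gamma}\,P_{JK\alpha}^{\beta\gamma\mu}$ through $|V|_{\vec{n}}\lesssim \zetab^{-1}\max_{\mu}|V^{\mu}|$. The two explicit Lie-derivative factors are controlled componentwise by $|X|_{|J|,\rank(J)}$ and $|H|_{|K|,\rank(K)}$ via the identities \eqref{eq13-11-july-2025} and \eqref{eq14-11-july-2025}. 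The inverse metric entries appearing inside $P_{JK\alpha}^{\beta\gamma\mu}$ are uniformly bounded under \eqref{eq-USA-condition} with $\eps_s$ small (using the series \eqref{eq1-11-july-2025}), so only the remaining $\Lcal_{\mathscr{Z}}^L H$ factors inside each monomial of $P$ need to be handled.

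These factors will be disposed of by a Moser-type argument: within each monomial of $P_{JK\alpha}^{\beta\gamma\mu}$, at most one $\Lcal_{\mathscr{Z}}^L H$ factor can carry order exceeding $[p/2]$, and all the others are then bounded by $|H|_{[p/2]}\leq 1$ by hypothesis \eqref{eq3-15-july-2025}. The distinguished high-order factor is then merged with the explicit $\Lcal_{\mathscr{Z}}^K H_{\beta\gamma}$, producing a combined bound of the form $|H|_{p_2,k_2}$ with $p_2\geq 1$ (since $|K|\geq 1$) and with $p_1+p_2\leq p$, $k_1+k_2\leq k$, where $p_1=|J|$ and $k_1=\rank(J)$, thanks to the constraints $|L|+|K|\leq |I|-|J|$ and $\rank(L)+\rank(K)\leq \rank(I)-\rank(J)$ that come with the decomposition. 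Summing over the finitely many admissible partitions $J\odot K\prec I$ then produces exactly the sum on the right of \eqref{eq5'-16-july-2025}.

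The main obstacle will be the combinatorial bookkeeping in the Moser step, namely the need to aggregate the arbitrary number of $\Lcal_{\mathscr{Z}}^L H$ factors appearing in each monomial of $P_{JK\alpha}^{\beta\gamma\mu}$ together with the explicit factor $\Lcal_{\mathscr{Z}}^K H_{\beta\gamma}$ into a single bound $|H|_{p_2,k_2}$ which respects both the order and rank budgets while preserving the bilinearity in $(X,H)$. The smallness condition $|H|_{[p/2]}\leq 1$ is exactly what makes this consolidation possible without introducing spurious higher powers of $|H|$, and the constraint $p_2\geq 1$ is preserved because the factor $\Lcal_{\mathscr{Z}}^K H$ is always genuinely present in each cross term of the decomposition.
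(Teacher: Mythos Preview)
Your proposal is correct and follows essentially the same route as the paper: apply the decomposition of Lemma~\ref{lem1-15-july-2025}, control the leading Lie-derivative term via Proposition~\ref{prop1-14-july-2025}, and handle the cross terms through Lemma~\ref{lem2-13-july-2025} together with the smallness~\eqref{eq3-15-july-2025}.

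One simplification worth noting: the ``merging'' step and the anticipated combinatorial obstacle in your last paragraph are unnecessary. The decomposition in Lemma~\ref{lem1-15-july-2025} is arranged so that every $L$ appearing inside $P_{JK\alpha}^{\beta\gamma\mu}$ satisfies $1\leq |L|\leq |K|$; combined with $|L|+|K|\leq |I|-|J|\leq p$, this forces $2|L|\leq p$, hence $|L|\leq [p/2]$ automatically. Thus \emph{every} $\Lcal_{\mathscr{Z}}^L H$ factor inside $P$ is bounded by $|H|_{[p/2]}\leq 1$, and the explicit factor $\Lcal_{\mathscr{Z}}^K H_{\beta\gamma}$ alone carries the $|H|_{p_2,k_2}$ contribution --- there is no high-order $L$ to merge with it. This is how the paper dispatches the bookkeeping in two lines.
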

\begin{proof}
We apply Proposition~\ref{lem1-15-july-2025}. First, by Lemma~\ref{prop1-14-july-2025}
\begin{equation}\label{eq2-15-july-2025}
\big|\Lcal_{\mathscr{Z}}^IX\big|_{\vec{n}}\lesssim \big|X\big|_{\vec{n},p,k}\lesssim \zetab^{-1}|X|_{p,k}.
\end{equation}
For the second term in the right-hand side of \eqref{eq7-04-july-2025}, we remark that it is a finite linear combination with constant coefficient of the monomials if the following form:
\begin{equation}\label{eq1-16-july-2025}
\Lcal_{\mathscr{Z}}^JX^{\alpha}\Lcal_{\mathscr{Z}}^{K}g_{\beta\gamma}
\Lcal_{\mathscr{Z}}^{K_2}g_{\beta_2\gamma_2}\cdots \Lcal_{\mathscr{Z}}^{K_n}g_{\beta_n\gamma_n}g^{\delta_1\lambda_1}g^{\delta_2\lambda_2}\cdots g^{\delta_m\lambda_m}\del_{\mu}
\end{equation}
with $|K|\geq |K_2|\geq\cdots \geq|K_n| \geq 1$. We observe that $|K_j|\leq [p/2]$. Then thanks to \eqref{eq3-15-july-2025}, 
$$
\aligned
&|\Lcal_{\mathscr{Z}}^JX^{\alpha}\Lcal_{\mathscr{Z}}^{K}g_{\beta\gamma}
\Lcal_{\mathscr{Z}}^{K_2}g_{\beta_2\gamma_2}\cdots \Lcal_{\mathscr{Z}}^{K_n}g_{\beta_n\gamma_n}g^{\delta_1\lambda_1}g^{\delta_2\lambda_2}\cdots g^{\delta_m\lambda_m}|
\\
& \lesssim \sum_{J\odot K\prec I\atop |K|\geq 1}
|\Lcal_{\mathscr{Z}}^JX^{\alpha}||\Lcal_{\mathscr{Z}}^KH_{\beta\gamma}|.
\endaligned
$$
Then by Lemma~\ref{lem2-13-july-2025} applied on \eqref{eq1-16-july-2025}, 
\begin{equation}\label{eq7-06-oct-2025(l)}
|\eqref{eq1-16-july-2025}|_{\vec{n}}
\lesssim \zetab^{-1}\sum_{J\odot K\prec I\atop |K|\geq 1}
|\Lcal_{\mathscr{Z}}^JX^{\alpha}||\Lcal_{\mathscr{Z}}^Kg_{\beta\gamma}|
\lesssim \zetab^{-1}\sum_{{p_1+p_2\leq p\atop k_1+k_2\leq k}\atop p_2\geq 1}|X|_{p_1,k_1}|H|_{p_2,k_2}
\end{equation}
where for the last inequality we applied \eqref{eq2-16-july-2025}.
\end{proof}
\subsection{Proof of Proposition~\ref{prop1-15-july-2025}}
The proof of of \eqref{eq1-17-july-2025} is direct. We recall that the first relation in \eqref{eq4-04-july-2025} leads us to the following high-order Leibniz rule:
\begin{equation}
\mathscr{Z}(uf) = \sum_{I_1\odot I_2 = I}\mathscr{Z}^{I_1}u\mathscr{Z}^{I_2}\Psi
\end{equation}
which leads us to \eqref{eq1-17-july-2025}.

The estimate \eqref{eq5-16-july-2025} is guaranteed by \eqref{eq5'-16-july-2025}. Then we recall \eqref{eq1-15-july-2025}. For $\ord(I) = p,\rank(I)=k$,
$$
\mathscr{Z}^I(X\cdot\Psi) = \sum_{I_1\odot I_2 = I}\mathscr{Z}^{I_1}X\cdot\mathscr{Z}^{I_2}\Psi.
$$
Then by Lemma~\ref{lem1-16-july-2025} and Lemma~\ref{lem2-23-july-2025},
$$
\aligned
|\mathscr{Z}^I(X\cdot\Psi)|_{\vec{n}}
\leq&
\sum_{I_1\odot I_2=I}
\big|\mathscr{Z}^{I_1}X\big|_{\vec{n}}\cdot\big|\mathscr{Z}^{I_2}\Psi\big|_{\vec{n}}
\lesssim_p\sum_{p_1+p_2\leq p\atop k_1+k_2\leq k}[X]_{p_1,k_1}[\Psi]_{p_2,k_2}
\\
& \lesssim_p \sum_{p_1+p_2\leq p\atop k_1+k_2\leq k}\zetab^{-1}[\Psi]_{p_1,k_1}|X|_{p_2,k_2}
+\sum_{p_1+p_2+p_3\leq p\atop k_1+k_2+k_3\leq k,p_3\geq 1}\zetab^{-1}[\Psi]_{p_1,k_1}|X|_{p_2,k_2}|H|_{p_3,k_3}.
\endaligned
$$
This leads us to \eqref{eq3-16-july-2025}.

For \eqref{eq3-23-july-2025}, we first remark that, thanks to Lemma~\ref{lem1-15-july-2025},
$$
\mathscr{Z}^I\del_{\alpha} = \Lcal_{\mathscr{Z}}^I\del_{\alpha} 
+ \sum_{J\odot K\prec I\atop |K|\geq 1} (\Lcal_\mathscr{Z}^J\del_{\alpha})^{\delta}\Lcal_{\mathscr{Z}}^KH_{\beta\gamma}
P_{JK\delta}^{\beta\gamma\mu}\del_{\mu}
$$
where $P_{JK\delta}^{\beta\gamma\mu}$ are polynomials of constant coefficients acting on the variables
$g^{\mu\nu}$ and $\Lcal_\mathscr{Z}^LH_{\mu\nu}$  with $|L|+|K| \leq|I|-|J|$ and $1\leq|L|\leq|K|$. The importance is that, 
$$
\big(\Lcal_{\mathscr{Z}}^I\del_{\alpha}\big)^\delta 
$$
are constants, which can be checked by induction. Thus for $I$ being typo $(p,k)$,
\begin{equation}\label{eq2-18-july-2025}
\big|\big(\mathscr{Z}^I\del_{\alpha}\big)^{\delta}\big|\lesssim_p 1 + |H|_{p,k}.
\end{equation}

We now turn our attention to \eqref{eq3-23-july-2025}. Let $\ord(I) = p,\rank(I) = k$. We have 
\be
\aligned
\mathscr{Z}^I\Big(\prod_{\jmath=1}^n\del_{\alpha_{\jmath}}\cdot\Psi\Big)
& =  \sum_{I_1\odot\cdots\odot I_n\odot J=I}
\prod_{\jmath=1}^n\mathscr{Z}^{I_{\jmath}}\del_{\alpha_{\jmath}}\cdot\mathscr{Z}^J\Psi
\\
& =  \sum_{I_1\odot\cdots\odot I_n\odot J=I}
\prod_{\jmath=1}^n\del_{\beta_{\jmath}}\cdot \prod_{\jmath=1}^n(\mathscr{Z}^{I_{\jmath}}\del_{\alpha_{\jmath}})^{\beta_{\jmath}}\mathscr{Z}^J\Psi.
\endaligned
\ee
Then in view of \eqref{eq17'-18-july-2025} we find 
\be
\aligned
\Big|\prod_{\jmath=1}^n\del_{\beta_{\jmath}}\cdot \prod_{\jmath=1}^n(\mathscr{Z}^{I_{\jmath}}\del_{\alpha_{\jmath}})^{\beta_{\jmath}}\mathscr{Z}^J\Psi\Big|_{\vec{n}}
\lesssim_n& \zetab^{-1}\Big|\prod_{\jmath=1}^n(\mathscr{Z}^{I_{\jmath}}\del_{\alpha_{\jmath}})^{\beta_{\jmath}}\mathscr{Z}^J\Psi\Big|_{\vec{n}}
\\
\lesssim_{n,p}& \zetab^{-1}\prod_{\jmath=1}^n(1+|H|_{p'_{\jmath},k'_{\jmath}})|\mathscr{Z}^J\Psi|_{\vec{n}}, 
\endaligned
\ee
where
$
p'_\jmath = \ord(I_\jmath),\quad k'_\jmath = \rank(I_\jmath).
$
We take $p_1=\max\{p'_\jmath\}$. Then by \eqref{eq-vect-condition},
\be
\Big|\prod_{\jmath=1}^n\del_{\beta_\jmath}\cdot \prod_{\jmath=1}^n(\mathscr{Z}^{I_\jmath}\alpha_\jmath)^{\beta_\jmath}\mathscr{Z}^J\Psi\Big|_{\vec{n}}
\lesssim_{n,p} \zetab^{-1}(1+|H|_{p_1,k_1})|\mathscr{Z}^J\Psi|_{\vec{n}}
\lesssim_{n,p} \zetab^{-1}(1+|H|_{p_1,k_1})[\Psi]_{p_2,k_2}
\ee
with $p_1+p_2\leq p,k_1+k_2\leq k$. This leads us to \eqref{eq3-23-july-2025}. 

%===============================================================================

\section{Further observations on Sobolev inequalities}
\label{section=N24}

\subsection{Sobolev inequalities for scalar-valued functions}

Let $\RR^3_+ = \{(x^1,x^2,x^3)| x^a\geq 0\}$. For $x\in\RR^3_+$, we denote by
\be
\Ccal(x,\rho) := \{y\in\RR^3_+| x^a\leq y\leq y^a+\rho\}, \qquad \Ccal(x):= \{y\in\RR^3_+| x^a\leq y\leq y^a+1\}.
\ee
We then recall two Sobolev inequalities stated earlier in~LeFloch-Ma~\cite[Appendix B]{PLF-YM-PDE}.

\begin{lemma}\label{lem1-11-mai-2025}
Let $u: \RR^3_+\mapsto\RR$ be a sufficiently regular function. Then one has 
\begin{equation}\label{eq1-11-mai-2025}
|u(x)|\lesssim \sum_{|I|\leq 1}\|\del^I u\|_{L^6(\Ccal(x))},
\end{equation}
\begin{equation}\label{eq2-11-mai-2025}
\|u\|_{L^2(\RR^3_+)}\lesssim \sum_{\alpha}\|\del_{\alpha} u\|_{L^2(\RR^3_+)}
\end{equation}
\end{lemma}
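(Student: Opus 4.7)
For the pointwise bound \eqref{eq1-11-mai-2025}, my plan is to invoke Morrey's embedding on the unit cube. Since the spatial dimension is three and $6>3$, the Sobolev exponent lies above the critical threshold, so $W^{1,6}(\Ccal(x))$ embeds continuously into $C^{0,1/2}(\overline{\Ccal(x)})$. Concretely, one reduces to a reference cube $[0,1]^3$ by a translation (the constant in Morrey's inequality is translation–invariant) and then uses the standard estimate
\begin{equation*}
|u(y)| \le |\overline u| + C\,\|\nabla u\|_{L^6([0,1]^3)}
\le C\bigl(\|u\|_{L^6([0,1]^3)}+\|\nabla u\|_{L^6([0,1]^3)}\bigr),
\qquad y\in[0,1]^3,
\end{equation*}
where $\overline u$ is the mean of $u$ on the cube. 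Evaluating at $y=x$ yields \eqref{eq1-11-mai-2025}. An elementary alternative, which avoids any appeal to fractional–order spaces, is to integrate along a one–parameter family of affine paths from $x$ to an arbitrary point of $\Ccal(x)$ and apply H\"older's inequality in the variables orthogonal to the path with exponents $6$ and $6/5$; this yields \eqref{eq1-11-mai-2025} with an explicit constant.

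For the global bound \eqref{eq2-11-mai-2025}, the statement is understood under the standing decay hypothesis carried by the wording ``sufficiently regular''; namely $u$ is assumed to tend to zero sufficiently fast at spatial infinity so that the integrations by parts performed below are justified. The plan is then to write $u$ as a one–sided integral of each partial derivative,
\begin{equation*}
u(x) = -\int_{x^1}^{+\infty} \del_1 u(t,x^2,x^3)\,\diff t,
\end{equation*}
and the analogous identities along the second and third coordinate axes. Combining these three representations via Cauchy--Schwarz and integrating over $\RR^3_+$ (using Minkowski's inequality to exchange the orders of integration) produces an inequality of Hardy--Poincar\'e type that delivers \eqref{eq2-11-mai-2025}. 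An equivalent route is to appeal to the Gagliardo--Nirenberg--Sobolev inequality $\|u\|_{L^6(\RR^3)}\lesssim\|\nabla u\|_{L^2(\RR^3)}$ after an extension by reflection across $\del \RR^3_+$, and then interpolate with the assumed $L^2$ decay at infinity.

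The main obstacle, in my view, is purely one of bookkeeping: one must make precise the decay class intended by ``sufficiently regular'', since \eqref{eq2-11-mai-2025} is not valid for an arbitrary $H^1$ function on $\RR^3_+$ without some control at infinity. In the applications of this lemma within the paper, the inequality is always applied to functions that either have compact support on $\Mcal_s$ (after a suitable cut-off) or else carry a spatial weight forcing the required integrability, so the decay assumption is automatically satisfied; for this reason I expect no real analytic difficulty, and the argument is essentially a direct reproduction of the proof given in LeFloch--Ma~\cite[Appendix~B]{PLF-YM-PDE}.
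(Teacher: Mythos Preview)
Your approach to \eqref{eq1-11-mai-2025} via Morrey's embedding $W^{1,6}\hookrightarrow C^0$ on the unit cube is correct and standard. The paper itself supplies no proof of this lemma at all; it simply cites \cite[Appendix~B]{PLF-YM-PDE}, so on this point you have done more than the paper does.

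There is, however, a genuine problem with \eqref{eq2-11-mai-2025}, and your attempt to salvage it by imposing decay does not succeed. The inequality $\|u\|_{L^2(\RR^3_+)}\lesssim\|\nabla u\|_{L^2(\RR^3_+)}$ fails by scaling even for compactly supported smooth functions: if $u_\lambda(x)=u(\lambda x)$ then $\|u_\lambda\|_{L^2}/\|\nabla u_\lambda\|_{L^2}=\lambda^{-1}\|u\|_{L^2}/\|\nabla u\|_{L^2}$, which is unbounded as $\lambda\to0$. No Hardy-type argument or interpolation with decay will repair this. If you look at how \eqref{eq2-11-mai-2025} is actually invoked later in the paper (in the proofs of the Klainerman--Sobolev estimates in the hyperboloidal and merging--Euclidean regions), it is used in the form
\[
\Big(\int_{\RR^3_+}|w|^6\,\diff y\Big)^{1/6}\lesssim\sum_{a}\Big(\int_{\RR^3_+}|\del_a w|^2\,\diff y\Big)^{1/2},
\]
applied to a compactly supported function $w$. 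So the left-hand side of \eqref{eq2-11-mai-2025} is a misprint: it should read $\|u\|_{L^6(\RR^3_+)}$, and the intended statement is the Gagliardo--Nirenberg--Sobolev inequality (which you yourself mention as an ``alternative route''). With that correction the proof is immediate by reflection to $\RR^3$ and the standard GNS inequality, and no decay hypothesis beyond $u\in\dot H^1$ is needed.
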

A direct result of these two inequalities is the following one (cf \cite[Lemma 4.1]{PLF-YM-PDE})
\begin{equation}\label{eq3-11-mai-2025}
|u(x)|\lesssim\sum_{|I|\leq 2}\|\del^Iu\|_{L^2(\Ccal(x))}.
\end{equation}

%-----------------------------------------------------------------------------------------------------------------------------------

\subsection{Klainerman-Sobolev type inequalities in the scalar case}

We recall the Klainerman-Sobolev inequalities adapted for Euclidean-hyperboloidal foliation, which are established in \cite{PLF-YM-PDE}. But here for technical reason, we establish a more precise version. When we consider the parameterization of $\Mcal_s^{\Hcal}$ by $x = (x^1,x^2,x^3)$:
\be
x \mapsto (T(s,r),x), \qquad r^2 = |x|^2 = \sum_{a}|x^a|^2,
\ee
the domain of the parameter $x$ is
$$
\mathcal{D}^{\Hcal}_s = \big\{r\leq r^{\Hcal}(s) = (s^2-1)/2\big\}.
$$
\begin{lemma}[K-S inequalities in hyperboloidal domain]\label{lem2-15-mai-2025}
Let $u$ be a sufficiently regular function defined in $\Mcal^{\Hcal}_{[s_0,s_1]}$ (not necessarily vanishing near the conical boundary $\del\Kcal = \{r=t-1\}$). Then for $(\hat{t},\hat{r})$ satisfying $0\leq \hat{r}\leq \hat{t}-1$ and $s_0\leq \hat{s} = \sqrt{\hat{t}^2- \hat{r}^2}\leq s_1$,
\begin{equation}\label{eq7-15-mai-2025}
\hat{t}^{1/2}|u(\hat{t},\hat{x})|
\lesssim \sum_{|I|\leq 1}\Big(\int_{\Ccal_{\hat{t},\hat{x}}}|L^Iu|^6\diff x\Big)^{1/6},
\end{equation}
\begin{equation}\label{eq8-15-mai-2025}
\hat{t}\Big(\int_{\Ccal_{\hat{t},\hat{x}}}|u|^6\diff x\Big)^{1/6}
\lesssim
\sum_{|I|\leq 1}\Big(\int_{\Mcal^{\Hcal}_{\hat{s}}}|L^Iu|^2\diff x\Big)^{1/2}.
\end{equation}
where
$$
\hat{x} := -(\sqrt{3}\,\hat{r}/3)(1,1,1),
$$
and
$$
\Ccal(\hat{t},\hat{x}) 
= \big\{(t,x)\in \Mcal^{\Hcal}_{\hat{s}}, x\in\Ccal(\hat{x},(\sqrt{3}/12)\hat{t})\big\}.
$$
\end{lemma}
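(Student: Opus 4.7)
The strategy is to reduce both inequalities to the flat Sobolev bounds of Lemma~\ref{lem1-11-mai-2025} via a hyperbolic rescaling adapted to the boost fields $L_a = t\delb_a$. Setting $\hat s:=\sqrt{\hat t^2-\hat r^2}$ and $\hat y:=\hat x/\hat t$, I would introduce on the rescaled slice the function $\tilde u(y):=u(T(\hat s,\hat t|y|),\hat t y)$. Under this change of variables, the cube $\Ccal_{\hat t,\hat x}$ of side $(\sqrt 3/12)\hat t$ becomes the cube $\Ccal(\hat y,\sqrt 3/12)$ of fixed size, independent of $\hat t$, and the chain rule gives $\del_{y^a}\tilde u=\hat t\,\delb_{x^a}u=(\hat t/t)L_a u$ since $L_a=t\delb_a$ on the hyperboloidal slice where $t=T(\hat s,r)=\sqrt{\hat s^2+r^2}$.

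The key geometric step will be to verify that the choice $\hat x=-(\sqrt 3/3)\hat r(1,1,1)$ —placing the cube in the octant $\{x^a\le 0\}$— together with the constraint $\hat r\le\hat t-1$ ensures that the entire rescaled cube $\Ccal(\hat y,\sqrt 3/12)$ sits inside $\mathcal D^{\Hcal}_{\hat s}/\hat t$ and that $t\simeq\hat t$ uniformly on it. Indeed the diameter of $\Ccal(\hat x,(\sqrt 3/12)\hat t)$ in $x$-coordinates is $\hat t/4$, so any point in it has $r\le \hat r+\hat t/4\le (5/4)\hat t$, giving $t\le\sqrt{\hat s^2+(5\hat t/4)^2}\lesssim\hat t$; the reverse bound $t\gtrsim\hat t$ follows from the placement of $\hat x$ and the elementary fact that the coordinates of points in the cube remain of order $\hat r$ or $\hat t$. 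Consequently $|\del_{y^a}\tilde u|\simeq |L_a u|$ on the rescaled cube.

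For the pointwise estimate \eqref{eq7-15-mai-2025}, applying \eqref{eq1-11-mai-2025} to $\tilde u$ at $\hat y$ on the fixed-size cube, together with the Jacobian identity $\int|\tilde u|^6\,dy=\hat t^{-3}\int_{\Ccal_{\hat t,\hat x}}|u|^6\,dx$ and the comparison $t\simeq\hat t$, directly produces the factor $\hat t^{1/2}$ on the left and the $L^6$ norms of $L^I u$ on the right. For the global estimate \eqref{eq8-15-mai-2025}, I would apply the three-dimensional Sobolev embedding $\|v\|_{L^6}\lesssim\|\nabla v\|_{L^2}$ to $\tilde u$ on a slightly enlarged cube $\Ccal(\hat y,\rho_0)$ with a universal $\rho_0>\sqrt 3/12$ where the comparison $t\simeq\hat t$ still holds, and then rescale:
\[
\hat t\,\|u\|_{L^6(\Ccal_{\hat t,\hat x})}\,\lesssim\,\hat t^{3/2}\|\tilde u\|_{L^6(\Ccal(\hat y,\sqrt 3/12))}\,\lesssim\,\hat t^{3/2}\|\del_y\tilde u\|_{L^2(\Ccal(\hat y,\rho_0))}\,\lesssim\,\sum_{|I|\le 1}\|L^I u\|_{L^2(\Mcal^{\Hcal}_{\hat s})},
\]
after Jacobian conversion and the cancellation of the factors $\hat t$ coming from $\del_{y^a}\tilde u=(\hat t/t)L_au$ and $dx=\hat t^3\,dy$.

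The main obstacle I anticipate is the passage from local to global in the second inequality. On the full slice $\Mcal^{\Hcal}_{\hat s}$ the ratio $t/\hat t$ is unbounded (it ranges from order $\hat s/\hat t$ near the vertex to order $\hat t$ near the conical boundary $\{r=(\hat s^2-1)/2\}$), so any attempt to apply Sobolev globally in the $y$-variables would lose powers of $\hat t$. This is precisely why the left-hand side is localized to a cube while the right-hand side uses the full hyperboloidal $L^2$ norm: the Sobolev embedding should therefore be applied only on the intermediate cube $\Ccal(\hat y,\rho_0)$ where $t\simeq\hat t$ is valid, and the resulting integrals of $\tilde u$ and $\nabla_y\tilde u$ on that cube are then majorized by the full integrals on $\Mcal^{\Hcal}_{\hat s}$ using the pointwise bound $t/\hat t\gtrsim 1$ there. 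The fact that $u$ need not vanish near $\{r=t-1\}$ is handled transparently by this localization since the cube stays strictly inside the interior hyperboloidal region.
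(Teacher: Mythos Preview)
Your approach is the paper's: it too rescales by a factor $\lambda\simeq\hat t$ (taking $\lambda=(\sqrt 3/12)\hat t$ when $\hat r\ge\tfrac34\hat t$ and $\lambda=\hat s/4\simeq\hat t$ otherwise), uses $\del_{y^a}=(\lambda/t)L_a$ together with the verification $t\simeq\hat t$ on the cube, and invokes \eqref{eq1-11-mai-2025} and \eqref{eq2-11-mai-2025}. One point to tighten for \eqref{eq8-15-mai-2025}: the homogeneous embedding $\|v\|_{L^6}\lesssim\|\nabla v\|_{L^2}$ fails on a bounded cube without zero boundary trace, so you must either carry the $\|\tilde u\|_{L^2}$ term on the right (which your final $\sum_{|I|\le1}$ does absorb after rescaling) or, as the paper does explicitly, multiply by a cutoff $\hat\chi$ supported in the doubled cube $\Ccal(\hat x,(\sqrt 3/6)\hat t)$ before applying the half-space inequality.
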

\begin{proof}
We define
\begin{equation}\label{eq7-11-mai-2025}
v(s,x):=u\big(\sqrt{t^2+r^2},x\big)
\end{equation}
and for $(\hat{t},\hat{x})\in \Mcal_{\hat{s}}^{\Hcal}$ with $\hat{x} := - \frac{\sqrt{3}\,\hat{r}}{3}(1,1,1)$,
\begin{equation}
v_{\hat{t},\hat{x}}(y) : = v(\hat{s}, \hat{x} + \lambda y), \qquad \hat{s} = \sqrt{\hat{t}^2- \hat{r}^2}
\end{equation}
where $\lambda\leq r^{\Hcal}(\hat{s})$ is a constant to be determined. Then
\begin{equation}\label{eq9-10-mai-2025}
\frac{\del}{\del y^a} v_{\hat{t},\hat{x}}(y) = \lambda \frac{\del}{\del x^a}v(\hat{s},\hat{x} + \lambda y) 
= \frac{\lambda}{t}L_au(t,x)
\end{equation}
with 
$$
x = \hat{x} + \lambda y, \qquad t = \sqrt{\hat{s}^2+|x|^2}.
$$

For \eqref{eq7-15-mai-2025}, we apply the Sobolev inequality \eqref{eq1-11-mai-2025} in 
$$
\mathcal{C} = \{y|0\leq y^a\leq 1\}
$$
and obtain
\begin{equation}\label{eq6-10-mai-2025}
\aligned
|v_{\hat{t},\hat{x}}(0)|^6& \lesssim \sum_{|I|\leq 1}\int_{\mathcal{C}}  |\del_{y}^Iv_{\hat{t},\hat{x}}(y)|^6 \diff y
\\
& =  \lambda^{-3}\int_{\Mcal_{\hat{s}}\cap \Ccal(\hat{x},\lambda)}
\Big(|u(t,x)|^6 + \Big(\frac{\lambda}{t}\Big)^6\sum_{a}|L_au(t,x)|^6\Big)\diff x
\endaligned
\end{equation}
where 
$$
\Ccal_{\hat{x},\lambda} = \Big\{x\big|- \frac{\sqrt{3}}{3}\hat{r}\leq x^a\leq - \frac{\sqrt{3}}{3}\hat{r}+\lambda\Big\}\subset \mathcal{D}^{\Hcal}_{\hat{s}}
$$

When $\hat{t}-1\geq \hat{r}\geq \frac{3}{4}\hat{t}$, we take $\lambda = \frac{\sqrt{3}}{12}\hat{t}\leq \hat{r}\leq r^{\Hcal}(\hat{s})$. Then in $\Mcal_s\cap\{|x- \hat{x}|\leq \hat{t}/4\}$, $|x|\geq \hat{t}/2$ and thus
$$
t^2 = x^2 + \hat{s}^2 \geq \frac{1}{4}\hat{t}^2+\hat{s}^2\geq \frac{1}{4}\hat{t}^2.
$$
Then \eqref{eq6-10-mai-2025} leads us to
\begin{equation}\label{eq8-10-mai-2025}
|v_{\hat{t},\hat{x}}(0)|\lesssim \hat{t}^{-1/2}\sum_{|I|\leq 1}\Big(\int_{\Mcal_{\hat{s}}\cap \Ccal(\hat{x},(\sqrt{3}/12)\hat{t})}|L^I u(t,x)|^6\diff x\Big)^{1/6}.
\end{equation}

When $\hat{r}\leq \frac{3}{4}\hat{t}$, we take $\lambda = \frac{\hat{s}}{4}$. It is clear that $\lambda \leq r^{\Hcal}(\hat{s})$ provided that $\hat{s}\geq s_0\geq  2$. In this case 
$$
|\hat{s}/t|\leq 1, \qquad \sqrt{7}/4 \leq |\hat{s}/\hat{t}|\leq 1.
$$
Thus \eqref{eq6-10-mai-2025} also leads us to \eqref{eq8-10-mai-2025}. Then we obtain \eqref{eq7-15-mai-2025}.

%------------------------------------------------

We now turn our attention to \eqref{eq8-15-mai-2025}. We consider a smooth cut-off function defined on $\RR$ such that $\chi$ is non-increasing, 
$$
\chi(\rho) = 
\begin{cases}
1, \qquad & \rho\leq \frac{\sqrt{3}}{12},
\\
0, \qquad & \rho\geq \frac{\sqrt{3}}{6}.
\end{cases}
$$
and 
$$
\hat{\chi}(x):= \prod_{a=1}^3\chi\big(\hat{t}^{-1}(x^a- \hat{x}^a)\big). 
$$
The function $\hat{\chi}$ is smooth, $\chi(x) = 1$ for $x\in \Ccal(\hat{x}, (\sqrt{3}/12)\hat{t})$ and $\chi(x) = 0$ for $x\in \Ccal(\hat{x},\infty)\setminus \Ccal(\hat{x},(\sqrt{3}/6)\hat{t})$. We denote by
\begin{equation}\label{eq4-11-mai-2025}
w_{\hat{t},\hat{x}}(y) = \hat{\chi}(\hat{x}+\lambda y)v(\hat{s}, \hat{x}+\lambda y)
\end{equation}
One has
\begin{equation}
w_{\hat{t},\hat{x}}(y) 
= 
\begin{cases}
v_{\hat{t},\hat{x}}(y), \qquad &0\leq y^a\leq \frac{\sqrt{3}}{12}(\hat{t}/\lambda), \qquad \forall a = 1,2,3;
\\
0 , \qquad & \frac{\sqrt{3}}{6}(\hat{t}/\lambda)\leq y^a, \qquad \exists a\in \{1,2,3\}.
\end{cases}
\end{equation}
Thus
$$
\int_{\Ccal(0,(\sqrt{3}/12)(\hat{t}/\lambda))}|v_{\hat{t},\hat{x}}(y)|^6\diff y
\leq\int_{\RR^3_+}|w_{\hat{t},\hat{x}}(y)|^6\diff y.
$$
Then by Sobolev inequality \eqref{eq2-11-mai-2025}, one has
$$
\Big(\int_{\Ccal(0,(\sqrt{3}/12)(\hat{t}/\lambda))}|v_{\hat{t},\hat{x}}(y)|^6\diff y\Big)^{1/6}
\lesssim \Big(\int_{\Ccal(0,(\sqrt{3}/6)(\hat{t}/\lambda))}\Big|\frac{\del}{\del y^a}w_{\hat{t},\hat{x}}(y)\Big|^2\diff y\Big)^{1/2}
$$
By \eqref{eq9-10-mai-2025} and \eqref{eq4-11-mai-2025}, one has
$$
\aligned
& \Big(\int_{\Ccal(0,(\sqrt{3}/12)(\hat{t}/\lambda))}|v_{\hat{t},\hat{x}}(y)|^6\diff y\Big)^{1/6}
\\
& \lesssim 
\Big(\int_{\Ccal(0,(\sqrt{3}/12)(\hat{t}/\lambda))}\big| u(t,x)\big|^2\diff y\Big)^{1/2}
+\sum_{a=1}^3\Big(\int_{\Ccal(0,(\sqrt{3}/12)(\hat{t}/\lambda))}\Big|\frac{\lambda}{t}L_au(t,x)\Big|\diff y\Big)^{1/2}
\\
& = \lambda^{-3/2}\Big(\int_{\Mcal_{\hat{s}}\cap \Ccal(\hat{x},(\sqrt{3}/6)\hat{t})}|u(t,x)|^2\diff x\Big)^{1/2}
+\lambda^{-3/2}\sum_{a=1}^3\Big(\int_{\Mcal_{\hat{s}}\cap \Ccal(\hat{x},(\sqrt{3}/6)\hat{t})}\Big|\frac{\lambda}{t}L_au(t,x)\Big|\diff x\Big)^{1/2}.
\endaligned
$$
On the other hand, recall that
$$
\Big(\int_{\Ccal(0,(\sqrt{3}/12)(\hat{t}/\lambda))}|v_{\hat{t},\hat{x}}(y)|^6\diff y\Big)^{1/6}
=
\lambda^{-1/2}\Big(\int_{\Mcal_{\hat{s}}\cap \Ccal(\hat{x}, (\sqrt{3}/12)\hat{t})}|u(t,x)|^2\diff x\Big)^{1/6},
$$
we obtain
\begin{equation}\label{eq5-11-mai-2025}
\aligned
& \lambda\Big(\int_{\Mcal_{\hat{s}}\cap \Ccal(\hat{x}, (\sqrt{3}/12)\hat{t})}|u(t,x)|^2\diff x\Big)^{1/6}
\\
& \lesssim 
\Big(\int_{\Mcal_{\hat{s}}\cap \Ccal(\hat{x},(\sqrt{3}/6)\hat{t})}|u(t,x)|^2\diff x\Big)^{1/2}
+\sum_{a=1}^3\Big(\int_{\Mcal_{\hat{s}}\cap \Ccal(\hat{x},(\sqrt{3}/6)\hat{t})}\Big|\frac{\lambda}{t}L_au(t,x)\Big|\diff x\Big)^{1/2}.
\endaligned
\end{equation}
Here we remark that
$$
\Ccal\big(\hat{x},(\sqrt{3}/6)\hat{t}\big)\subset \mathcal{D}_{\hat{s}}^{\Hcal}.
$$
When $r^{\Hcal}(\hat{s})\geq \hat{r}\geq\frac{3}{4}\hat{t}$, we fix $\lambda  = \hat{t}$. We remark that on $\Mcal_{\hat{s}}\cap \Ccal(\hat{x},(\sqrt{3}/6)\hat{t})$,
$$
t^2 = \hat{s}^2 + x^2 \geq \hat{s}^2 + |\hat{r} - \hat{t}/2|^2 \geq \hat{s}^2 + \hat{t}^2/16 \geq (\hat{t}/4)^2.
$$
Thus \eqref{eq5-11-mai-2025} leads us to
\begin{equation}\label{eq6-11-mai-2025}
\Big(\int_{\Mcal_{\hat{s}}\cap \Ccal(\hat{x},(\sqrt{3}/12)\hat{t})}|u(t,x)|^6\diff x\Big)^{1/6}
\lesssim \hat{t}^{-1}\|u\|_{L^2_f(\Mcal^{\Hcal}_{\hat{s}})} 
+ \hat{t}\sum_{a}\|L_a u\|_{L^2_f(\Mcal^{\Hcal}_{\hat{s}})}.
\end{equation}
When $\hat{r}\leq \frac{3}{4}\hat{t}$, we take $\lambda = \frac{\hat{s}}{4}$. In this case 
$$
|\hat{s}/t|\leq 1, \qquad \sqrt{7}/4\leq |\hat{s}/\hat{t}|\leq 1.
$$
Thus one still concludes by \eqref{eq6-11-mai-2025}. Returning to the definition of $v$, i.e., \eqref{eq7-11-mai-2025}, we obtain \eqref{eq8-15-mai-2025}.
\end{proof}

\begin{lemma}[K-S inequalities in Merging-Euclidean domain]\label{lem2-16-mai-2025}
Let $u$ be a sufficiently regular function defined in $\Mcal^{\ME}_{[s_0,s_1]}$ (not necessarily vanishing near the conical boundary $\del\Kcal = \{r=t-1\}$). Then for $(\hat{t},\hat{r})$ satisfying $0\leq \hat{t}-1\leq \hat{r}$ and $s_0\leq \hat{s}\leq s_1$ with
$
\hat{t} = T(\hat{s},\hat{r}),
$
\begin{equation}\label{eq9-15-mai-2025}
\hat{r}^{1/3}|u(\hat{t},\hat{r})|\lesssim 
\sum_{j+|K|\leq 1}\Big(\int_{\Ccal_{\hat{t},\hat{x}}}|\delb_r^j\Omega^Ku|^6\diff x\Big)^{1/6},
\end{equation}
\begin{equation}\label{eq10-15-mai-2025}
\hat{r}^{2/3}\Big(\int_{\Ccal_{\hat{t},\hat{x}}}|u|^6\diff x\Big)^{1/6}
\lesssim
\sum_{j+|K|\leq 1}\Big(\int_{\Mcal^{\EM}_s}|\delb_r^j\Omega^Ku|^2\diff x\Big)^{1/2},
\end{equation}
where
$$
\hat{x} := (\hat{r},0,0)
$$
and
$$
\Ccal(\hat{t},\hat{x}) = \big\{(t,x)\in\Mcal^{\ME}_{\hat{s}},x = x(y),y\in[0,\pi/6]^3\big\}
$$
with
$$
\aligned
x^1 = (\hat{r} + y^1)\cos y^2\cos y^3,
\quad 
x^2 = (\hat{r} + y^1)\cos y^2\sin y^3,
\quad 
x^3 = (\hat{r} + y^1)\sin y^2.
\endaligned
$$
\end{lemma}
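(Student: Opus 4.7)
\textbf{Plan of proof for Lemma~\ref{lem2-16-mai-2025}.} The strategy parallels that of Lemma~\ref{lem2-15-mai-2025}, but the boost fields $L_a$ are replaced by the pair $(\delb_r,\Omega_{ab})$ adapted to the near-Euclidean structure of the slice. The basic idea is to pull back the standard scalar Sobolev inequalities on $\RR^3_+$ via the spherical-like change of coordinates $x=x(y)$, $y\in[0,\pi/6]^3$, centered at $\hat x=(\hat r,0,0)$.

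\textbf{Step 1 (coordinate geometry).} First, I will introduce the lift $\tilde u(x):=u(T(\hat s,|x|),x)$ on $\mathcal{D}^{\ME}_{\hat s}$ and set $v(y):=\tilde u(x(y))$. A direct computation using the product rule and $\del_{x^a}T=(x^a/r)\,\delb_rT$ shows that $\del_{x^a}\tilde u=\delb_a u$, so that $\del_r\tilde u=\delb_r u$ and the Euclidean rotations acting on $\tilde u$ coincide with $\Omega_{ab}u$. Differentiating the explicit spherical parametrization, one finds
\[
\del_{y^1}v=\delb_r u,\qquad \del_{y^2}v=A(y)^{ab}\Omega_{ab}u,\qquad \del_{y^3}v=B(y)^{ab}\Omega_{ab}u,
\]
where $A^{ab},B^{ab}$ are bounded trigonometric functions of $(y^2,y^3)$ (uniformly on the cube). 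The Jacobian is $\det(\del x/\del y)=(\hat r+y^1)^2\cos y^2$, hence $\diff x \simeq \hat r^2 \,\diff y$ on $[0,\pi/6]^3$, since $\hat r\geq r^{\Hcal}(s_0)\gtrsim 1$ keeps $r=\hat r+y^1$ comparable to $\hat r$ throughout.

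\textbf{Step 2 (pointwise bound \eqref{eq9-15-mai-2025}).} Apply the scalar Sobolev inequality \eqref{eq1-11-mai-2025} to $v$ on the unit-scale cube $[0,\pi/6]^3$ (up to an affine rescaling absorbed in universal constants):
\[
|v(0)|^{6}\lesssim \sum_{|I|\leq 1}\int_{[0,\pi/6]^3}|\del_y^I v|^{6}\diff y.
\]
Transfer back to $x$-variables via Step~1: each $|\del_y^I v|^6$ becomes a bounded combination of $|\delb_r^j\Omega^K u|^6$ with $j+|K|\leq 1$, and $\diff y\simeq \hat r^{-2}\diff x$. Taking sixth roots yields $\hat r^{1/3}|u(\hat t,\hat x)|\lesssim \sum_{j+|K|\leq 1}\|\delb_r^j\Omega^K u\|_{L^6(\Ccal_{\hat t,\hat x})}$, which is the desired estimate.

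\textbf{Step 3 (weighted $L^6$--$L^2$ bound \eqref{eq10-15-mai-2025}).} Following the cutoff argument in the proof of Lemma~\ref{lem2-15-mai-2025}, choose a smooth cutoff $\hat\chi(y)$ equal to $1$ on $[0,\pi/6]^3$ and supported in a slightly larger cube $[0,\pi/3]^3$ that still embeds into $\Mcal^{\ME}_{\hat s}$. Applying the scaling-invariant Sobolev embedding \eqref{eq2-11-mai-2025} to $w:=\hat\chi v$ on $\RR^3_+$ gives
\[
\Big(\int_{[0,\pi/6]^3}|v|^{6}\diff y\Big)^{1/6}\lesssim \|w\|_{L^6(\RR^3_+)}\lesssim \|\del w\|_{L^2(\RR^3_+)}\lesssim \sum_{|I|\leq 1}\|\del^I v\|_{L^2([0,\pi/3]^3)}.
\]
Converting both sides back to $x$-coordinates via the Jacobian $\diff x \simeq \hat r^2\diff y$ and the derivative identities of Step~1 produces a factor $\hat r^{-1/3}$ on the left from the $L^6$ change of variables and a factor $\hat r^{-1}$ on the right from the $L^2$ change of variables. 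Rearranging yields \eqref{eq10-15-mai-2025}.

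\textbf{Main obstacle.} The cleanest subtlety is verifying that the larger cube used for the cutoff does not leave the merging-Euclidean domain — in particular, that $r=\hat r+y^1$ remains $\geq r^{\Hcal}(\hat s)$ throughout. This follows because in our regime $\hat r\geq \hat t-1\geq r^{\Hcal}(\hat s)$ and $r$ only grows along $y^1$, so the expanded cube stays inside $\Mcal^{\ME}_{\hat s}$. A secondary point is that $\cos y^2$ is bounded below on $[0,\pi/3]$, so the spherical parametrization is a diffeomorphism with uniformly bounded Jacobians; this keeps the conversion between $\diff x$ and $\diff y$ quantitatively uniform in $\hat r$. Apart from these geometric verifications, the argument is a direct transcription of the hyperboloidal proof to the angular/radial frame $(\delb_r,\Omega_{ab})$.
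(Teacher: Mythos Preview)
Your proposal is correct and follows essentially the same approach as the paper's own proof: pull back the function to the cube $[0,\pi/6]^3$ via the spherical-type parametrization centered at $\hat x=(\hat r,0,0)$, identify the $y$-derivatives with $\delb_r u$ and bounded combinations of $\Omega_{ab}u$, then apply \eqref{eq1-11-mai-2025} for the pointwise bound and \eqref{eq2-11-mai-2025} together with a cutoff on the enlarged cube $[0,\pi/3]^3$ for the $L^6$--$L^2$ estimate, tracking the powers of $\hat r$ through the Jacobian $\diff x\simeq\hat r^{2}\,\diff y$. The paper carries out exactly these computations explicitly (writing out $\del_{y^2}v=(\cos y^3\,\Omega_{13}+\sin y^3\,\Omega_{23})u$ and $\del_{y^3}v=\Omega_{12}u$), with the same cutoff argument and Jacobian bookkeeping you outline.
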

\begin{proof}
We define
\begin{equation}
v(s,x) := u\big(T(s,r),x\big),
\end{equation}
and for $(\hat{t},\hat{x})\in \Mcal^{\ME}_{\hat{s}}$ with $\hat{x} := (\hat{r},0,0)$ and $\hat{t} = T(\hat{s},\hat{r})$, we define
\begin{equation}
v_{\hat{t},\hat{x}}(\varrho,\vartheta,\varphi) := v(\hat{s},x),
\end{equation}
with $x = x(\varrho,\vartheta,\varphi)$ defined by
\begin{equation}\label{eq2-12-mai-2025}
x^1 = (\hat{r} + \varrho)\cos\vartheta\cos\varphi,
\quad 
x^2 = (\hat{r} + \varrho)\cos\vartheta\sin\varphi,
\quad 
x^3 = (\hat{r} + \varrho)\sin\vartheta
\end{equation}
where we only consider the domain of parameterization
$$
(y^1,y^2,y^3) = (\varrho,\vartheta,\varphi)\in[0,\pi/3]^3. 
$$
Here for the simplicity of expression, we denote by $y^1 = \varrho,y^2=\vartheta,y^3 = \varphi$. We remark particularly that $x(0,0,0) = \hat{x} = (\hat{r},0,0)$. Then
\begin{equation}\label{eq1-12-mai-2025}
\aligned
\frac{\del}{\del y^1}=\frac{\del}{\del\varrho}v_{\hat{t},\hat{x}}(\varrho,\vartheta,\varphi) & =  \,\delb_ru(t,x),
\\
\frac{\del}{\del y^2} = \frac{\del}{\del \vartheta}v_{\hat{t},\hat{x}}(\varrho,\vartheta,\varphi) & = \, \big(\cos\varphi \Omega_{13} + \sin\varphi\Omega_{23}\big)u(t,x),
\\
\frac{\del}{\del y^3} = \frac{\del}{\del\varphi}v_{\hat{t},\hat{x}}(\varrho,\vartheta,\varphi) & = \, \Omega_{12}u(t,x)
\endaligned
\end{equation}
with
$$
t = T(\hat{s},x).
$$

We apply the Sobolev inequality \eqref{eq1-11-mai-2025} on $v_{\hat{t},\hat{x}}$ in the set 
$$
\Ccal :=[0,\pi/6]^3.
$$
Considering \eqref{eq1-12-mai-2025},
\begin{equation}
|v_{\hat{t},\hat{x}}(0)|^6\lesssim \sum_{|I|\leq 1}\int_{\Ccal}|\del_y^I u|^6\diff y
= \sum_{j+|K|\leq 1}\int_{\Ccal}|\delb_r^j\Omega^K u|^6\diff r\diff\vartheta\diff\varphi
\end{equation}
Recall that
$$
\diff x = \varrho^2\cos\vartheta\diff r \diff\vartheta\diff\varphi
$$
and in $\mathcal{D}$ one has
\begin{equation}
\cos\vartheta\in[\sqrt{3}/2,1], \qquad \hat{r}\leq \varrho\leq \hat{r} + \frac{\pi}{6}.
\end{equation}
We thus obtain
\begin{equation}\label{eq3-12-mai-2025}
|u(\hat{t},\hat{x})|\lesssim \hat{r}^{-1/3}\sum_{j+|K|\leq 1}
\Big(\int_{\Ccal_{\hat{t},\hat{x}}}|\delb_r^j\Omega^Ku|^6 \diff x\Big)^{1/6}
\end{equation}
with
$$
\Ccal_{\hat{t},\hat{x}} = \big\{(t,x)\in\Mcal^{\ME}_{\hat{s}}| x=x(y)=x(\varrho,\vartheta,\varphi), (\varrho,\vartheta,\varphi)\in\Ccal\big\}
$$
where $x = (\varrho,\vartheta,\varphi)$ is defined by \eqref{eq2-12-mai-2025}. Then \eqref{eq3-12-mai-2025} concludes \eqref{eq9-15-mai-2025}.

For \eqref{eq10-15-mai-2025}, we first introduce a smooth, non-increasing cut-off function $\chi$ defined on $\RR$ such that 
$$
\chi(\rho) = 
\begin{cases}
1, \qquad & \rho\leq \pi/6
\\
0, \qquad & \rho\geq \pi/3
\end{cases}
$$
and define, recall that $(y^1,y^2,y^3) = (\varrho,\vartheta,\varphi)$,
$$
\hat{\chi}(\varrho,\vartheta,\varphi)
:= \prod_{a=1}^3\chi(y^a) 
=  \chi(\varrho)\chi(\vartheta)\chi(\varphi).
$$
Then let 
$$
w_{\hat{t},\hat{x}}(y) = w_{\hat{t},\hat{x}}(\varrho,\vartheta,\varphi) 
:= \hat{\chi}(\varrho,\vartheta,\varphi) v_{\hat{t},\hat{x}}(\varrho,\vartheta,\varphi).
$$
Then we apply the Sobolev inequality \eqref{eq2-11-mai-2025} on $w_{\hat{t},\hat{x}}$ in $\RR^3_+$:
\begin{equation}
\Big(\int_{\RR^3_+}|w_{\hat{t},\hat{x}}|^6\diff \varrho\diff \vartheta \diff\varphi\Big)^{1/6}
\lesssim \sum_{|I|\leq 1}
\Big(\int_{\RR^3_+}|\del_{y}^I w_{\hat{t},\hat{x}}|^2\diff \varrho\diff \vartheta \diff\varphi\Big)^{1/2}
\end{equation}
which leads us to
\begin{equation}\label{eq2-15-mai-2025}
\Big(\int_{\Ccal}|u(t,x)|^6\diff \varrho\diff\vartheta\diff\varphi\Big)^{1/6}
\lesssim
\sum_{|I|\leq 1}\Big(\int_{\Ccal'}
\del_y^Iu(t,x)\diff \varrho\diff\vartheta\diff\varphi\Big)^{1/2}
\end{equation}
where
$$
\Ccal' :=[0,\pi/3]^3.
$$
Remark that when $(\varrho,\vartheta,\varphi)\in \Ccal'$, 
$$
\cos\vartheta\in[1/2,1], \qquad \hat{r}\leq r\leq \hat{r} + \frac{\pi}{3}.
$$
Thus 
$$
\int_{\Ccal}|f|\diff\varrho\diff\vartheta\diff\varphi 
= \int_{\Ccal_{\hat{t},\hat{x}}}|f|\big(r^2\cos\vartheta\big)^{-1}\diff x
\simeq \hat{r}^{-2}\int_{\Ccal_{\hat{t},\hat{x}}}|f|\diff x.
$$
Then \eqref{eq2-15-mai-2025} leads us to
\begin{equation}
\aligned
\hat{r}^{-1/3}\Big(\int_{\Ccal_{\hat{t},\hat{x}}}|u(t,x)|^6\diff x\Big)^{1/6}
\simeq& \Big(\int_{\Ccal_{\hat{t},\hat{x}}}|u(t,x)|^6\diff \varrho\diff\vartheta\diff\varphi\Big)^{1/6}
\\
& \lesssim 
\sum_{|I|\leq 1}\Big(\int_{\Ccal'}
\del_y^Iu(t,x)\diff \varrho\diff\vartheta\diff\varphi\Big)^{1/2}
\\
& \lesssim 
\hat{r}^{-1}
\sum_{j+|K|\leq 1}\Big(\int_{ \Mcal^{\ME}_{\hat{s}}}
|\delb_r^j\Omega^K u|^2\diff x\Big)^{1/2}.
\endaligned
\end{equation}
which concludes \eqref{eq10-15-mai-2025}.
\end{proof}

%===============================================================================
\section{The null structure in the hyperboloidal region}
\subsection{Some basic facts about semi-hyperboloidal frame}
We recall the relation between the adapted frame $\{\delb_{\alpha}\}$ and the semi-hyperboloidal frame $\{\delu_{\alpha}\}$ in $\Mcal^{\Hcal}_{[s_0,s_1]}$, which leads to \eqref{eq4-13-june-2025}. We recall the following two important properties of the weight $(s/t)$ in $\Mcal^{\Hcal}_{[s_0,s_1]}$ which we have established in our previous work (one can also check directly by induction). For any admissible operator $\mathscr{Z}^I$ of type $(p,k)$,
\begin{equation}\label{eq3-23-aout-2025}
\big|\mathscr{Z}^I(s/t)\big|\lesssim_p (s/t)(t/s^2)^{p-k}\lesssim (s/t),\quad \text{in}\quad\Mcal^{\Hcal}_{[s_0,s_1]}.
\end{equation}

Furthermore, the following estimates holds in $\Mcal^{\Hcal}_{[s_0,s_1]}$, provided that \eqref{eq-USA-condition}:
\begin{equation}\label{eq2-23-aout-2025}
\aligned
&|\delb_0\Hb_{00}|\lesssim (s/t)^2s^{-1}|H| + (s/t)^3|\del H|,
\quad
&&|\delb_c\Hb_{00}|\lesssim (s/t)^3s^{-1}|H| + (s/t)^2|\delu H|,
\\
&|\delb_0\Hb_{b0}|\lesssim (s/t)s^{-1}|H| + (s/t)^2|\del H|,
\quad
&&|\delb_c\Hb_{b0}|\lesssim (s/t)^2s^{-1}|H| + (s/t)|\delu H|,
\\
&|\delb_0\Hb_{ab}|\lesssim (s/t)^2s^{-1}|H| + (s/t)|\del H|,
\quad
&&|\delb_c\Hb_{ab}|\lesssim (s/t)s^{-1}|H| + |\delu H|.
\endaligned
\end{equation}
On the other hand, thanks to \eqref{eq10-03-aout-2025}, in $\Mcal^{\Hcal}_{[s_0,s_1]}$, we can obtain the following estimates through an explicit calculation:
\begin{equation}\label{eq1-23-aout-2025}
\aligned
&|\delb_a\bar{\eta}_{bc}|\lesssim (s/t)s^{-1},
\quad
&&|\delb_a\bar{\eta}_{0b}|\lesssim (s/t)^2s^{-1},
\quad
&&&|\delb_a\bar{\eta}_{00}|\lesssim (s/t)^3s^{-1};
\\
&|\delb_0\bar{\eta}_{ab}|\lesssim (s/t)^2s^{-1},
\quad
&&|\delb_0\bar{\eta}_{0b}|\lesssim  (s/t)s^{-1},
\quad
&&&|\delb_0\bar{\eta}_{00}|\lesssim (s/t)^2s^{-1}.
\endaligned
\end{equation}

In some circumstance we also need to treat homogeneous coefficients such as $(1-r/t)$. We claim the following estimate
\begin{equation}\label{eq1-24-aout-2025}
\big|\mathscr{Z}^I(1-r/t)\big|\lesssim 
\begin{cases}
t^{p-k},\quad & p>k
\\
\frac{|t-r|}{t},\quad &p=k,
\end{cases}
\quad\text{in}\quad \{3t/4\leq r\leq 3t\}
\end{equation}
\subsection{Estimates on deformation tensors}
A direct calculation leads to
\begin{equation}
\aligned
\pi[Z]^{\alpha\beta} =& -Z(g^{\alpha\beta}) + g^{\alpha\mu}\del_{\mu}Z^{\beta} + g^{\beta\nu}\del_{\nu}Z^{\alpha}
\\
=&\begin{cases}
-\del_{\delta}H^{\alpha\beta},\quad & Z = \del_{\delta},
\\
-L_aH^{\alpha\beta} + H^{\alpha\mu}\del_{\mu}Z^{\beta} + H^{\beta\nu}\del_{\nu}Z^{\alpha},\quad &Z = L_a.
\end{cases}
\endaligned
\end{equation}
Then in the semi-hyperboloidal frame,
\begin{equation}\label{eq7-23-aout-2025}
\underline{\pi[Z]}^{\alpha\beta} = 
\begin{cases}
\,-\del_{\delta}\Hu^{\alpha\beta} + \del_{\delta}\big(\Psiu_{\mu}^{\alpha}\Psiu_{\nu}^{\beta}\big)H^{\mu\nu},\quad &Z = \del_{\delta},
\\
{\aligned
&-L_a\Hu^{\alpha\beta} 
+ \Hu^{\alpha\gamma}\delu_{\gamma}\underline{Z}^{\beta} + \Hu^{\gamma\beta}\delu_{\gamma}\underline{Z}^{\alpha}
\\
&\quad+ L_a\big(\Psiu_{\mu}^{\alpha}\Psiu_{\nu}^{\beta}\big)H^{\mu\nu}
- \Hu^{\alpha\gamma}Z^{\nu}\delu_{\gamma}(\Psiu_{\nu}^{\beta})
- \Hu^{\gamma\beta}Z^{\mu}\delu_{\gamma}(\Psiu_{\mu}^{\alpha}),
\endaligned}
\quad& Z = L_a.
\end{cases}
\end{equation}
We remark that in the above expression, $\Psiu_{\mu}^{\alpha}$ are interior-homogeneous of degree zero, and $Z^{\nu}$ are interior-homogeneous of degree $+1$. Then
$$
\deg\big(\del_{\delta}\big(\Psiu_{\mu}^{\alpha}\Psiu_{\nu}^{\beta}\big)\big) = -1,
\quad 
\deg\big(\delu_{\gamma}\underline{Z}^{\alpha}\big) = \deg\big(Z^{\nu}\delu_{\gamma}(\Psiu_{\nu}^{\beta})\big) = \deg\big(L_a\big(\Psiu_{\mu}^{\alpha}\Psiu_{\nu}^{\beta}\big)\big) = 0
$$
We also remark that in $\Mcal^{\Hcal}_{[s_0,s_1]}$,
\begin{equation}\label{eq5-24-aout-2025}
0\leq 1-r/t\lesssim  (s/t)^2. 
\end{equation}
We thus have the following estimates: 
\begin{equation}\label{eq8-23-aout-2025}
\big|\underline{\pi[Z]}^{\alpha\beta}\big|_{p,k}\lesssim_p
\begin{cases}
|\del_{\delta} H|_{p,k} + t^{-1}|H|_{p,k},\quad &Z = \del_{\delta}
\\
|L_aH|_{p,k} + |H|_{p,k},\quad &Z = L_a,
\end{cases}\quad \text{in}\quad \Mcal^{\Hcal}_{[s_0,s_1]},
\end{equation}
\begin{equation}\label{eq9-23-aout-2025}
\big|\del(\underline{\pi[Z]}^{\alpha\beta})\big|_{p,k}
\lesssim_p 
\begin{cases}
|\del\del_{\delta} H|_{p,k} + t^{-1}|\del H|_{p,k} + t^{-2}|H|_{p,k},\quad &Z = \del_{\delta},
\\
|\del L_a H|_{p,k} + |\del H|_{p,k} + t^{-1}| H|_{p,k},\quad &Z = L_a,
\end{cases}\quad \text{in}\quad \Mcal^{\Hcal}_{[s_0,s_1]}.
\end{equation}
These estimates are sufficient in the region $\{r\leq 3t/4\}\cap\Mcal^{\Hcal}_{[s_0,s_1]}$. However when near the light-cone. we need to make a semi-hyperboloidal decomposition in order to clarify the null structure.

W then make the following calculation.
\begin{lemma}\label{lem1-24-aout-2025}
In $\Mcal^{\Hcal}_{[s_0,s_1]}$,
\begin{equation}\label{eq4-24-aout-2025}
\underline{\pi[L_a]}^{00} = -L_a\Hu^{00} - 2(x^a/t)\Hu^{00}.
\end{equation}
\end{lemma}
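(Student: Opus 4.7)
The approach is to compute $\underline{\pi[L_a]}^{00}=\pi[L_a]^{\mu\nu}\thetau^0_\mu\thetau^0_\nu$ directly by exploiting the identity
$$
\pi[Z]^{\mu\nu} = -(\Lcal_Z g^{-1})^{\mu\nu},
$$
which follows from $\pi[Z]_{\alpha\beta}=(\Lcal_Z g)_{\alpha\beta}$ together with $\Lcal_Z(g^{\mu\alpha}g_{\alpha\nu})=0$. Pairing with the co-vector $\thetau^0=\diff t-(x^b/t)\diff x^b$ and applying the Leibniz rule for the Lie derivative on the $(2,0)$-tensor $g^{-1}$, the bulk identity reads
$$
\underline{\pi[L_a]}^{00} \;=\; -L_a\bigl(g^{-1}(\thetau^0,\thetau^0)\bigr) \;+\; 2\,g^{-1}\!\bigl(\Lcal_{L_a}\thetau^0,\thetau^0\bigr).
$$
Since $g^{-1}(\thetau^0,\thetau^0)=\gu^{00}=-(s/t)^2+\Hu^{00}$, both ingredients reduce to the evaluation of $\Lcal_{L_a}\thetau^0$ and of $L_a((s/t)^2)$.

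The main geometric input is the observation that in the hyperboloidal domain the boost $L_a$ is tangent to the slices $\Mcal_s$: indeed $L_a(s)=0$ for $s=\sqrt{t^2-r^2}$, since $\del_t s=t/s$ and $\del_a s=-x^a/s$. Writing $\thetau^0=(s/t)\diff s$ and using $L_a(s/t)=-\,sx^a/t^2$, one obtains the eigenform relation
$$
\Lcal_{L_a}\thetau^0 \;=\; -(x^a/t)\,\thetau^0.
$$
A direct consequence is $L_a((s/t)^2)=-2(x^a/t)(s/t)^2$, which can also be verified by a two-line computation from $L_a(t)=x^a$ and $L_a(x^b)=t\delta_a^b$.

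Inserting these two identities and $\gu^{00}=-(s/t)^2+\Hu^{00}$ into the main formula produces
$$
\aligned
\underline{\pi[L_a]}^{00}
& = \, -L_a\!\bigl(-(s/t)^2+\Hu^{00}\bigr) \;-\; 2(x^a/t)\bigl(-(s/t)^2+\Hu^{00}\bigr)
\\
& = \,L_a((s/t)^2) - L_a\Hu^{00} + 2(x^a/t)(s/t)^2 - 2(x^a/t)\Hu^{00}.
\endaligned
$$
The Minkowski contributions cancel by virtue of $L_a((s/t)^2)=-2(x^a/t)(s/t)^2$, and we arrive at the claimed identity $\underline{\pi[L_a]}^{00}=-L_a\Hu^{00}-2(x^a/t)\Hu^{00}$. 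Since every step is a direct manipulation, there is no substantial obstacle; the only care needed is bookkeeping with the semi-hyperboloidal coframe and verifying the cancellation of the flat contributions, which is built into the tangency of $L_a$ to $\Mcal_s$.
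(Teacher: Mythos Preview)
Your proof is correct and takes a genuinely different route from the paper. The paper starts from the general coordinate expression \eqref{eq7-23-aout-2025} for $\underline{\pi[L_a]}^{\alpha\beta}$, which involves six terms built from $\Hu^{\alpha\gamma}$, the transition-matrix components $\Psiu_{\mu}^0$, and their $L_a$- and $\delu$-derivatives. It then evaluates each of these at $\alpha=\beta=0$, tracking nontrivial contributions of the form $((x^ax^c/t^2)-\delta_{ac})\Hu^{0c}$ that arise separately in the $L_a(\Psiu^0_\mu\Psiu^0_\nu)H^{\mu\nu}$ and $\Hu^{0\gamma}Z^{\nu}\delu_{\gamma}\Psiu_{\nu}^0$ terms and cancel only after explicit computation. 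Your argument bypasses this bookkeeping entirely by observing that $\thetau^0=(s/t)\,\diff s$ is an eigenform of $\Lcal_{L_a}$ with eigenvalue $-(x^a/t)$, a direct consequence of the tangency $L_a(s)=0$. Combined with $\pi[L_a]^{\mu\nu}=-(\Lcal_{L_a}g^{-1})^{\mu\nu}$ and the Leibniz rule, this collapses the computation to two lines, with the Minkowski contribution cancelling automatically. Your route is shorter and exposes the geometric origin of the identity; the paper's approach has the advantage of reusing a single general formula that also yields the other components $\underline{\pi[Z]}^{\alpha\beta}$ needed in the subsequent estimates.
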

\begin{proof}
In \eqref{eq7-23-aout-2025}, when $\alpha = \beta = 0$,
\begin{equation}\label{eq3-24-aout-2025}
\aligned
\pi[L_a]^{00} =& -L_a\Hu^{00} 
+ \Hu^{0\gamma}\delu_{\gamma}\underline{Z}^{0} + \Hu^{\gamma0}\delu_{\gamma}\underline{Z}^{0}
\\
&+ L_a\big(\Psiu_{\mu}^0\Psiu_{\nu}^0\big)H^{\mu\nu}
- \Hu^{0\gamma}Z^{\nu}\delu_{\gamma}(\Psiu_{\nu}^0)
- \Hu^{\gamma0}Z^{\mu}\delu_{\gamma}(\Psiu_{\mu}^0).
\endaligned
\end{equation}
Then we remark that when $Z = L_a = t\delu_a$,
\begin{equation}
\underline{Z}^{\gamma} = t\delta_a^{\gamma},
\end{equation}
and
\begin{equation}
L_a\big(\Psiu_{\mu}^0\Psiu_{\nu}^0\big)
=
\begin{cases}
0,\quad &\mu=\nu=0,
\\
-\delta_{ac} + (x^ax^c/t^2) = \underline{\eta}_{ac},\quad &\mu = 0,\nu = c,
\\
\frac{\delta_{ac}x^d + \delta_{ad}x^c}{t} - \frac{2x^ax^cx^d}{t^3},\quad &\mu = c,\nu = d.
\end{cases}
\end{equation}
\begin{equation}
\delu_{\gamma}\big(\Psi_{\nu}^0\big)
=\begin{cases}
0,\quad &\nu=0,
\\
(x^a/t^2),\quad &\gamma = 0,\nu = a,
\\
\frac{x^ax^c}{t^3} - \frac{\delta_{ac}}{t},\quad &\gamma = c,\nu = a.
\end{cases}
\end{equation}
Then we calculate:
\begin{subequations}\label{eq2-24-aout-2025}
\begin{equation}
\Hu^{0\gamma}\delu_{\gamma}\underline{Z}^{0} + \Hu^{\gamma0}\delu_{\gamma}\underline{Z}^{0}
= 0.
\end{equation}
\begin{equation}
\aligned
\frac{1}{2}L_a\big(\Psiu_{\mu}^0\Psiu_{\nu}^0\big)H^{\mu\nu}
=& -H^{a0} + (x^d/t)H^{ad} + (x^ax^c/t^2)\big(H^{0c} - (x^d/t)H^{dc}\big)
\\
=& -\Hu^{a0} + (x^ax^c/t^2)\Hu^{0c} = \underline{\big((x^ax^c/t^2) - \delta_{ac}\big)\Hu^{0c}}.  
\endaligned
\end{equation}
\begin{equation}
\aligned
- \Hu^{0\gamma}Z^{\nu}\delu_{\gamma}(\Psiu_{\nu}^0)
- \Hu^{\gamma0}Z^{\mu}\delu_{\gamma}(\Psiu_{\mu}^0)
=& - t\Hu^{0\gamma}\delu_{\gamma}(\Psiu_{a}^0)
- t\Hu^{\gamma0}\delu_{\gamma}(\Psiu_{a}^0)
\\
=&-2(x^a/t) \Hu^{00}\, \underline{- 2\Hu^{0c}\big((x^ax^c/t^2)-\delta_{ac}\big)}.
\endaligned
\end{equation}
\end{subequations}
Substitute \eqref{eq2-24-aout-2025} into \eqref{eq3-24-aout-2025}, and remark that the \underline{underlined terms} cancel each other (!). Thus we obtain \eqref{eq4-24-aout-2025}
\end{proof}
Now we are ready to establish the following estimates.
\begin{proposition}\label{prop1-24-aout-2025}
Assume that \eqref{eq-com-condition} holds and that \eqref{eq-USA-condition} holds with a sufficiently small $\eps$. Then in $\Mcal^{\Hcal}_{[s_0,s_1]}$,
\begin{equation}\label{eq8-aout-2024}
\big|\underline{\pi[Z]}^{00}\big|_{p,k}\lesssim_p
\begin{cases}
|\del \Hu^{00}|_{p,k} + t^{-1}|H|_{p,k},\quad &Z = \del_{\delta},
\\
|L_a\Hu^{00}|_{p,k} + |\Hu^{00}|_{p,k},\quad &Z = L_a.
\end{cases}
\end{equation}
\begin{equation}\label{eq9-24-aout-2025}
\big|\del\big(\underline{\pi[Z]}^{00}\big)\big|_{p,k}\lesssim_p 
\begin{cases}
|\del \del \Hu^{00}|_{p,k} + t^{-1}|\del H|_{p,k} + t^{-2}|H|_{p,k},\quad &Z = \del_{\delta},
\\
|\del \Hu^{00}|_{p+1,k+1} + t^{-1}|H|_{p,k} ,\quad & Z = L_a.
\end{cases}
\end{equation}
\end{proposition}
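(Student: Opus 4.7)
The plan is to derive both estimates directly from the explicit decomposition \eqref{eq7-23-aout-2025} and, crucially, from the cancellation already established in Lemma~\ref{lem1-24-aout-2025}, exploiting the homogeneity properties of the transition coefficients $\Psiu_\mu^\alpha$ and of $(x^a/t)$ on $\Mcal^{\Hcal}_{[s_0,s_1]}$. The key structural input is that these coefficients are interior-homogeneous of degree zero, so their admissible derivatives are uniformly bounded, and each additional partial derivative $\del$ costs a factor of $t^{-1}$ via \eqref{eq3-23-aout-2025}.

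First, for the case $Z=\del_\delta$ in \eqref{eq8-aout-2024}, I would start from the specialization of \eqref{eq7-23-aout-2025} at $\alpha=\beta=0$, namely
$$
\underline{\pi[\del_\delta]}^{00} = -\del_\delta \Hu^{00} + \del_\delta\bigl(\Psiu_\mu^0 \Psiu_\nu^0\bigr) H^{\mu\nu}.
$$
The first term yields the principal contribution $|\del\Hu^{00}|_{p,k}$. For the second, since $\Psiu_\mu^0 \Psiu_\nu^0$ is interior-homogeneous of degree zero, its $\del_\delta$ derivative is homogeneous of degree $-1$. Applying any admissible operator $\mathscr{Z}^I$ of type $(p,k)$, the Leibniz rule together with the homogeneity calculus gives $|\mathscr{Z}^I(\del_\delta(\Psiu_\mu^0 \Psiu_\nu^0))|\lesssim_p t^{-1}$, which in combination with $|H|_{p,k}$ yields the term $t^{-1}|H|_{p,k}$. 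For the case $Z=L_a$, I invoke directly Lemma~\ref{lem1-24-aout-2025}, which gives the clean identity $\underline{\pi[L_a]}^{00} = -L_a\Hu^{00} - 2(x^a/t)\Hu^{00}$; since $(x^a/t)$ is homogeneous of degree zero on $\Mcal^{\Hcal}_{[s_0,s_1]}$, its admissible derivatives are uniformly bounded, and a Leibniz expansion yields the claimed bound.

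For the derivative estimates \eqref{eq9-24-aout-2025}, I would compose $\del$ with the two identities above. For $Z=\del_\delta$, we write
$$
\del\,\underline{\pi[\del_\delta]}^{00} = -\del\del_\delta \Hu^{00} + \del\bigl(\del_\delta(\Psiu_\mu^0 \Psiu_\nu^0)\bigr) H^{\mu\nu} + \del_\delta\bigl(\Psiu_\mu^0 \Psiu_\nu^0\bigr)\,\del H^{\mu\nu}.
$$
The second-derivative application to the degree-zero coefficient produces a homogeneous function of degree $-2$, giving the $t^{-2}|H|_{p,k}$ contribution; the last term supplies the $t^{-1}|\del H|_{p,k}$ contribution. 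For $Z=L_a$, I would use $\del L_a = L_a\,\del + [\del, L_a]$, where $[\del_\beta,L_a]$ is a constant-coefficient combination of partial derivatives, so that
$$
\del\bigl(-L_a\Hu^{00} - 2(x^a/t)\Hu^{00}\bigr) = -L_a\del\Hu^{00} + (\text{commutator}) \cdot \Hu^{00} - 2\del\bigl((x^a/t)\Hu^{00}\bigr).
$$
Applying an admissible operator $\mathscr{Z}^I$ of type $(p,k)$ to $L_a\del\Hu^{00}$ produces a contribution controlled by $|\del\Hu^{00}|_{p+1,k+1}$ since $L_a$ raises both order and rank by one; the commutator term is absorbed into $|\del\Hu^{00}|_{p+1,k+1}$ as well, while the derivative of the homogeneous factor $(x^a/t)$ contributes $t^{-1}|H|_{p,k}$ via the degree-zero-to-degree-$(-1)$ mechanism.

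The main technical obstacle lies in the bookkeeping of multi-index types in the $Z=L_a$ case of \eqref{eq9-24-aout-2025}, where one must ensure that the combined action of $\del$, $L_a$, and $\mathscr{Z}^I$ remains within the target space of admissible operators of type $(p+1,k+1)$ without spilling into higher rank; this is precisely why the right-hand side displays $|\del\Hu^{00}|_{p+1,k+1}$ rather than a lower-order norm. Everything else reduces to a systematic Leibniz expansion combined with the homogeneity bound \eqref{eq3-23-aout-2025} and the preliminary estimates \eqref{eq8-23-aout-2025}--\eqref{eq9-23-aout-2025}, with no further geometric input required.
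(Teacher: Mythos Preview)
Your proof is correct and follows exactly the approach indicated in the paper, which simply states that the estimates are ``direct consequences of \eqref{eq7-23-aout-2025} (for $Z = \del_{\delta}$) and \eqref{eq4-24-aout-2025} (for $Z = L_a$).'' You have faithfully unpacked this one-line proof: using the specialization of \eqref{eq7-23-aout-2025} at $\alpha=\beta=0$ for $Z=\del_\delta$, invoking the cancellation identity of Lemma~\ref{lem1-24-aout-2025} for $Z=L_a$, and then handling the derivative estimates via Leibniz plus the homogeneity calculus---precisely the mechanism the paper has in mind.
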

\begin{proof}
These are direct consequences of \eqref{eq7-23-aout-2025} (for $Z = \del_{\delta}$) and \eqref{eq4-24-aout-2025} (for $Z = L_a$).
\end{proof}
\begin{remark}
The component $\Hu^{00}$ enjoys the a similar behavior of $H^{\Ncal00}$. In fact, we remark that
$$
\thetau^0 = \diff t - (x^a/t)\diff x^a = \diff t - \diff r + (1-r/t)\diff r.
$$
Then
\begin{equation}\label{eq10-23-aout-2025}
\Hu^{\alpha\beta} = H(\thetau^0,\thetau^0) 
= H^{\Ncal00} + 2(1-r/t)H(\diff t - \diff r, \diff r) + (1-r/t)^2H(\diff r,\diff r).
\end{equation}
The ``error'' terms contain $(1-r/t)$ which is equivalent to $(s/t)^2$.
\end{remark}

\subsection{Semi-hyperboloidal decomposition of Dirac commutators in hyperboloidal region}
\begin{lemma}\label{lem6-25-aout-2025}
Let $\Psi$ be a sufficiently regular spinor field and $Z$ an admissible vector field. Then
\begin{equation}
\aligned
[\widehat{Z},\opDirac]\Psi 
=& -\frac{1}{2}\underline{\pi[Z]}^{\alpha0}\delu_{\alpha}\cdot\widehat{\del_t}\Psi 
- \frac{1}{2t}\underline{\pi[Z]}^{\alpha b}\delu_{\alpha}\cdot\widehat{L_b}\Psi 
\\
&- \frac{1}{8}g^{\mu\nu}\underline{\pi[Z]}^{\alpha0}\delu_{\alpha}\cdot\del_{\mu}\cdot\nabla_{\nu}\del_t\cdot\Psi 
- \frac{1}{8t}g^{\mu\nu}\underline{\pi[Z]}^{\alpha b}
\delu_{\alpha}\cdot\del_{\mu}\cdot\nabla_{\nu}L_b\cdot\Psi
\\
&-\frac{1}{4}\pi[Z]^{\alpha\beta}\nabla_{\alpha}\del_{\beta}\cdot\Psi
-\frac{1}{4}[Z,W]\cdot\Psi.
\endaligned
\end{equation}
\end{lemma}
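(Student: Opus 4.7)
\textbf{Proof plan for Lemma~\ref{lem6-25-aout-2025}.}
The plan is to start from the commutator identity of Proposition~\ref{prop1-16-july-2025} applied to $X=Z$ and then rewrite the right-hand side by unfolding the contraction $\pi[Z]^{\alpha\beta}\del_\alpha\cdot\nabla_\beta\Psi$ in the semi-hyperboloidal frame. Since $Z\in\mathscr{Z}=\{\del_{\delta},L_a,\Omega_{ab}\}$, each component $Z^{\gamma}$ is affine in the coordinates $x^{\alpha}$, so that $g^{\mu\nu}\del_{\mu}\del_{\nu}(Z^{\gamma})=0$. Consequently Proposition~\ref{prop1-16-july-2025} reduces to
\begin{equation*}
[\widehat{Z},\opDirac]\Psi \; = \; - \tfrac{1}{2}\pi[Z]^{\alpha\beta}\del_{\alpha}\cdot\nabla_{\beta}\Psi
-\tfrac{1}{4}\pi[Z]^{\alpha\beta}\nabla_{\alpha}\del_{\beta}\cdot\Psi - \tfrac{1}{4}[Z,W]\cdot\Psi,
\end{equation*}
which already contributes the last two terms displayed in the lemma. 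It remains to recast the quadratic piece $-\tfrac12\pi[Z]^{\alpha\beta}\del_{\alpha}\cdot\nabla_{\beta}\Psi$ into the asserted form.

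The next step is a purely algebraic rewriting using the transition matrices of Section~\ref{section===71}. Writing $\pi[Z]^{\alpha\beta}=\Phiu_\mu^\alpha\Phiu_\nu^\beta\,\underline{\pi[Z]}^{\mu\nu}$ and commuting the homogeneous-of-degree-zero coefficients through Clifford multiplication, one obtains
\begin{equation*}
\pi[Z]^{\alpha\beta}\del_{\alpha}\cdot\nabla_{\beta}\Psi
\; = \; \underline{\pi[Z]}^{\mu\nu}\delu_{\mu}\cdot\bigl(\Phiu_\nu^{\beta}\nabla_{\beta}\Psi\bigr).
\end{equation*}
The key observation is that $\Phiu_0^{\beta}\del_{\beta}=\del_t$ and $\Phiu_b^{\beta}\del_{\beta}=\delu_b=t^{-1}L_b$; since $\nabla$ is $C^{\infty}$-linear in the direction, this yields $\Phiu_0^{\beta}\nabla_{\beta}\Psi=\nabla_{\del_t}\Psi$ and $\Phiu_b^{\beta}\nabla_{\beta}\Psi=t^{-1}\nabla_{L_b}\Psi$. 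Splitting the sum over $\nu\in\{0,b\}$ accordingly produces the two quadratic blocks $\underline{\pi[Z]}^{\mu 0}\delu_{\mu}\cdot\nabla_{\del_t}\Psi$ and $t^{-1}\underline{\pi[Z]}^{\mu b}\delu_{\mu}\cdot\nabla_{L_b}\Psi$.

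The last step is to replace $\nabla_{\del_t}\Psi$ and $\nabla_{L_b}\Psi$ by their Clifford-adapted counterparts using the definition \eqref{equa-511m}:
\begin{equation*}
\nabla_{\del_t}\Psi \; = \; \widehat{\del_t}\Psi +\tfrac{1}{4}g^{\rho\sigma}\del_{\rho}\cdot\nabla_{\sigma}\del_t\cdot\Psi,
\qquad
\nabla_{L_b}\Psi \; = \; \widehat{L_b}\Psi +\tfrac{1}{4}g^{\rho\sigma}\del_{\rho}\cdot\nabla_{\sigma}L_b\cdot\Psi.
\end{equation*}
Substituting and multiplying by $-\tfrac12$, the two linear-in-$\widehat{(\cdot)}$ terms appear with the prefactors $-\tfrac12$ and $-\tfrac{1}{2t}$, while the correction terms produce the cubic contributions with prefactors $-\tfrac18$ and $-\tfrac{1}{8t}$ and with the precise Clifford factors $\del_\mu\cdot\nabla_\nu\del_t$ and $\del_\mu\cdot\nabla_\nu L_b$ exactly as stated.

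The argument is mostly algebraic and I do not anticipate a genuine analytic obstacle. The only point requiring care is the consistency of the prefactor $t^{-1}$ in front of $\widehat{L_b}$ and the cubic $\nabla_\nu L_b$ term: it arises naturally because $\delu_b=t^{-1}L_b$ and the contraction pulls the factor $t^{-1}$ in front \emph{before} the Clifford-adapted correction is applied, so that the correction is expressed in terms of $\nabla_\sigma L_b$ (and not $\nabla_\sigma\delu_b$, which would produce unwanted $\del_\gamma(t^{-1})$ terms). This is the only point where one must be vigilant in the computation.
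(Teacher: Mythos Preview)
Your proposal is correct and follows essentially the same route as the paper's proof: start from the commutator identity (the paper cites \eqref{eq5-21-aout-2025}, which is exactly Proposition~\ref{prop1-16-july-2025} specialized to admissible $Z$ with the $g^{\mu\nu}\del_{\mu}\del_{\nu}(Z^{\gamma})$ term dropped), pass the quadratic term $\pi[Z]^{\alpha\beta}\del_{\alpha}\cdot\nabla_{\beta}\Psi$ to the semi-hyperboloidal frame, split over $\nu=0$ and $\nu=b$ using $\delu_0=\del_t$ and $\delu_b=t^{-1}L_b$, and finally insert the definition of $\widehat{\del_t}$ and $\widehat{L_b}$. Your remark about pulling out $t^{-1}$ \emph{before} applying the Clifford-adapted correction is precisely the subtlety the paper's computation handles in the same way.
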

\begin{proof}
Recall \eqref{eq5-21-aout-2025},
$$
\aligned
[\widehat{Z},\opDirac]\Psi  
& =   - \frac{1}{2} \pi[Z]^{\alpha\beta} \del_\alpha \cdot\nabla_{\beta} \Psi
-\frac{1}{4}\pi[Z]^{\alpha\beta}\nabla_{\alpha}\del_{\beta}\cdot\Psi
-\frac{1}{4}[Z,W]\cdot\Psi
\\
&= -\frac{1}{2}\underline{\pi[Z]}^{\alpha\beta}\delu_{\alpha}\cdot\nabla_{\delu_{\beta}}\Psi
-\frac{1}{4}\pi[Z]^{\alpha\beta}\nabla_{\alpha}\del_{\beta}\cdot\Psi
-\frac{1}{4}[Z,W]\cdot\Psi
\\
&= -\frac{1}{2}\underline{\pi[Z]}^{\alpha0}\delu_{\alpha}\cdot\nabla_{t}\Psi 
- \frac{1}{2t}\underline{\pi[Z]}^{\alpha b}\delu_{\alpha}\cdot\nabla_{L_b}\Psi 
-\frac{1}{4}\pi[Z]^{\alpha\beta}\nabla_{\alpha}\del_{\beta}\cdot\Psi
-\frac{1}{4}[Z,W]\cdot\Psi
\\
&= -\frac{1}{2}\underline{\pi[Z]}^{\alpha0}\delu_{\alpha}\cdot\widehat{\del_t}\Psi 
- \frac{1}{8}g^{\mu\nu}\underline{\pi[Z]}^{\alpha0}\delu_{\alpha}\cdot\del_{\mu}\cdot\nabla_{\nu}\del_t\cdot\Psi 
\\
&\quad - \frac{1}{2t}\underline{\pi[Z]}^{\alpha b}\delu_{\alpha}\cdot\widehat{L_b}\Psi 
- \frac{1}{8t}g^{\mu\nu}\underline{\pi[Z]}^{\alpha b}
\delu_{\alpha}\cdot\del_{\mu}\cdot\nabla_{\nu}L_b\cdot\Psi
\\
&\quad-\frac{1}{4}\pi[Z]^{\alpha\beta}\nabla_{\alpha}\del_{\beta}\cdot\Psi
-\frac{1}{4}[Z,W]\cdot\Psi.
\endaligned
$$
\end{proof}
We remark that among these terms, the first and the third are quadratic, while the rest are cubic in nature. 
%=============================================================================

\subsection*{Acknowledgments.} 

PLF was supported by the project ANR-23-CE40-0010-02 funded by the Agence Nationale de la Recherche (ANR), and the MSCA Staff Exchange Project 101131233 funded by the European Research Council (ERC). 
YM was supported by the Shaanxi Provincial Basic Science Research Fund (Mathematics and Physics) under the grant number~22JSZ003. 
WDZ was supported by the CSC scholarship program (Project ID: 202406280362).
 
%================================================================================
%=============================================================================

\bibliography{references}

\begin{thebibliography}{99}  

\bibitem{Alinhac-book}
{\sc S. Alinhac,}
{\sl Geometric analysis of hyperbolic differential equations: an introduction,}
Lecture Note Series 374, 
The London Math. Soc., Cambridge, 2010. 

\bibitem{Andersson}
{\sc L. Andersson, B. Moser, M.A. Oancea, C.F. Paganini, and G. Schmid,} 
Pseudodifferential Weyl calculus on vector bundles, 
Preprint ArXiv:2507.11965. 

\bibitem{Bachelot88} 
{\sc A. Bachelot,}
Probl\`eme de Cauchy global pour des syst\`emes de Dirac--Klein--Gordon,
Ann. Inst. Henri Poincar\'e 48 (1988), 387--422.

\bibitem{Bachelot94} 
{\sc A. Bachelot,}
Asymptotic completeness for the Klein-Gordon equation on the Schwarzschild metric,
Ann. Inst. Henri Poincar\'e: Phys. Th\'eor. 61 (1994), 411--441.

\bibitem{Bejenaru-Herr-2017} 
{\sc I. Bejenaru and S. Herr,} 
On global well-posedness and scattering for the massive Dirac--Klein--Gordon system,
J. Eur. Math. Soc. 19 (2017), 2445--2467.

\bibitem{Bertlmann00}
{\sc R. Bertlmann,} 
{\sl Anomalies in quantum field theory,}
International Series of Monographs on Physics. Clarendon Press, 2000.

\bibitem{Bieri} 
{\sc L. Bieri,}
An extension of the stability theorem of the Minkowski space in general relativity,
J. Differential Geom. 86 (2010), 17--70.

\bibitem{BieriZipser} 
{\sc L. Bieri and N. Zipser,}
{\sl Extensions of the stability theorem of the Minkowski space in general relativity,}
AMS/IP Studies Adv. Math. 45. Amer. Math. Soc., International Press, Cambridge,  2009.

\bibitem{Bigorgne} 
{\sc L. Bigorgne,}
Propri\'et\'es asymptotiques des solutions \`a donn\'ees petites du syst\`eme de Vlasov-Maxwell (in French), 
Ph.D. thesis, June 2019, Universit\'e Paris-Saclay. 

\bibitem{Bigorgne2} 
{\sc L. Bigorgne, D. Fajman, J. Joudioux, J. Smulevici, and M. Thaller,}
Asymptotic stability of Minkowski spacetime with non-compactly supported massless Vlasov matter, 
Arch. Ration. Mech. Anal. 242 (2021), 1--147.  

\bibitem{Bourguignon}
{\sc J.-P.~Bourguignon,}
Stabilit\'e par d\'eformation non-lin\'eaire de la m\'etrique de Minkowski 
(d'apr\`es D. Christodoulou et S. Klainerman) (French),  
S\'eminaire Bourbaki, Vol. 1990/91, 
Ast\'erisque No. 201-203 (1991), Exp. No. 740 (1992),  321--358. 

\bibitem{Cai-Dong-2024}
{\sc Y. Cai, S. Dong, K. Li and J. Zhao,}
Large data global existence for coupled massive-massless wave-type systems,
Preprint ArXiv:2406.05762.

\bibitem{Canarutto-2018}
{\sc D. Canarutto,}
Two-spinor tetrad and Lie derivatives of Einstein-Cartan-Dirac fields,
Archivum Math. 54 (2018), 205--226.

\bibitem{Blaschke}
{\sc D.N. Blaschke, F. Gieres, M. Reboud, and M. Schweda,}
The energy-momentum tensor(s) in classical gauge theories, 
Nuclear Physics B 912 (2016), 192--223.

\bibitem{ChenZhou}
{\sc H. Chen and Y. Zhou,}
Global regularity for Einstein-Klein-Gordon system with ${U(1)\times \RR}$ isometry group, 
Preprint. 

\bibitem{ChenXuantao} 
{\sc X. Chen,} 
Global stability of Minkowski spacetime for a spin-1/2 field, Preprint 
ArXiv:2201.08280. 

\bibitem{CBG}
{\sc Y. Choquet-Bruhat and R. Geroch,}
Global aspects of the Cauchy problem in general relativity,
Comm. Math. Phys. 14 (1969), 329--335. 

\bibitem{YCB}
{\sc Y. Choquet-Bruhat},
{\sl General relativity and the Einstein equations}, Oxford Math. Monograph,
Oxford Univ. Press, 2009.

\bibitem{CK}
{\sc D. Christodoulou and S. Klainerman,}
{\sl The global nonlinear stability of the Minkowski space,}
Princeton Math. Ser. 41, 1993.

\bibitem{DIPP}
{\sc S. Dong, P.G. LeFloch, and Z. Wyatt,}
Global evolution of the U(1) Higgs Boson: nonlinear stability and uniform energy bounds, 
Annals Henri Poincar\'e 22 (2021), 677--713. 
 
\bibitem{Dong-Li-2022}
{\sc S. Dong and K. Li,}
Global solution to the cubic Dirac equation in two space dimensions,
J. Differ. Equations 331(2022), 192--222.

\bibitem{Dong-Li-Yuan-2023}
{\sc S. Dong, K. Li and X. Yuan,}
Global solution to the 3D Dirac--Klein--Gordon system with uniform energy bounds,
Calc. Var. 62, 146 (2023). 

\bibitem{Dong-Li-Ma-Yuan-2024}
{\sc S. Dong, K. Li, Y. Ma and X. Yuan,}
Global behavior of small data solutions for the 2D Dirac--Klein--Gordon system,
Trans. Amer. Math. Soc. 377 (2024), 649-695.
 
\bibitem{Dong-Li-Zhao-2023}
{\sc S. Dong, K. Li and J. Zhao,}
Cubic Dirac equations with a class of large data,
Preprint ArXiv:2312.07880.

\bibitem{DW}
{\sc S. Dong, Z.Wyatt,}
Hidden structure and sharp asymptotics for the Dirac--Klein--Gordon system in two space dimensions,
 Ann. Inst. H. Poincaré Anal. Non Linéaire 41 (2024), no. 6, pp. 1419–1464.

\bibitem{FJS3} 
{\sc D. Fajman, J. Joudioux, and J. Smulevici,}
The stability of the Minkowski space for the Einstein-Vlasov system,
Anal. PDE 14 (2021), 425--531. 

\bibitem{Fan-Wang-Zhang-2024}
{\sc M. Fan, Y. Wang, and X. Zhang,}
Nonexistence of time-periodic solutions of the Dirac Equation in Kerr--Newman--(A)dS spacetime,
Preprint ArXiv:2404.13255.

\bibitem{Foures} 
{\sc Y. Four\'es-Bruhat,}
Th\'eor\`emes d'existence pour certains syst\`emes d'\'equations aux d\'eriv\'ees partielles non-lin\'eaires,
Acta Math. 88 (1952), 42--225.

\bibitem{Ge-Jiang-Wang-Zhang-Zhong}
{\sc B. Ge, J. Jiang, B. Wang, H. Zhang and Z. Zhong,} 
Strong cosmic censorship for the massless Dirac field in the Reissner-Nordstrom-de Sitter spacetime,
JHEP 01 (2019) 123.

\bibitem{Hamilton-2017} 
{\sc M. Hamilton} 
{\sl Mathematical gauge theory}, 
Springer International Publishing AG, 2017.

\bibitem{HintzVasy1}
{\sc P. Hintz and A. Vasy,} 
The global non-linear stability of the Kerr-de Sitter family of black holes,
Acta Math. 220 (2018), 1--206. 

\bibitem{HintzVasy2}
{\sc P. Hintz and A. Vasy,}  
Stability of Minkowski space and polyhomogeneity of the metric,
Ann. PDE 6 (2020), no. 1, Paper No. 2, 146 pp. 

\bibitem{Hormander} 
{\sc L. H\"ormander,}
{\sl Lectures on nonlinear hyperbolic differential equations,}
Springer Verlag, Berlin, 1997.

\bibitem{IP} 
{\sc A.D. Ionescu and B. Pausader,} 
Global solutions of quasi-linear systems of Klein-Gordon equations in 3D, 
J. Eur. Math. Soc. 16 (2015), 2355--2431.

\bibitem{IP-two} 
{\sc A.D. Ionescu and B. Pausader,} 
On the global regularity for a wave-Klein-Gordon coupled system, 
Acta Math. Sin. 35 (2019), 933--986. 

\bibitem{IP3} 
{\sc A.D. Ionescu and B. Pausader,} 
The Einstein-Klein-Gordon coupled system: global stability of the Minkowski solution,  
Princeton University Press, Princeton, NJ, 2022.  

\bibitem{Jia-Li-2024}
{\sc J. Jia, and J. Li,}
Radiation fields for semilinear Dirac equations with spinor null forms,
J. Hyperbolic Differ. Equ. 22 (2025), 173--223.

\bibitem{John3} 
{\sc F. John,}
Existence for large times of strict solutions of nonlinear wave equations in three space dimensions for small initial data,
Comm. Pure Appl. Math. 40 (1987), 79--109.

\bibitem{Katayama12a}
{\sc S. Katayama,}
Global existence for coupled systems of nonlinear wave and Klein-Gordon equations in three space dimensions,
Math. Z. 270 (2012), 487--513.

\bibitem{Katayama12b}
{\sc S. Katayama,}
Asymptotic pointwise behavior for systems of semi-linear wave equations in three space dimensions,
J. Hyperbolic Differ. Equ. 9 (2012), 263--323.

% C'est la BONNE manière d'écrire cette référence CAR c'est un proceeding.
\bibitem{Katayama-Kubo}
{\sc S. Katayama, H. Kubo,}
Global existence for quadratically perturbed massless Dirac equations under the null condition, 
In  ``Fourier analysis'', Trends in Mathematics, Birkh\"auser/Springer, Cham, 2014, pp.~453--262.

\bibitem{KauffmanLindblad} 
{\sc C. Kauffman and H. Lindblad,}
Global stability of Minkowski space for the Einstein-Maxwell-Klein-Gordon system in generalized wave
coordinates, Preprint ArXiv:2109.03270.

\bibitem{Klainerman80}
{\sc S. Klainerman,}
Global existence for nonlinear wave equations 
Comm. Pure Appl. Math.  33 (1980), 43--101.

\bibitem{Klainerman85}
{\sc S. Klainerman,}
Global existence of small amplitude solutions to nonlinear Klein-Gordon equations in four spacetime dimensions,
Comm. Pure Appl. Math. 38 (1985), 631--641.

\bibitem{Klainerman85-add}
{\sc S. Klainerman,}
Uniform decay estimates and the Lorentz invariance of the classical wave equations, 
Comm. Pure Appl. Math. 38 (1985), 321--332. 

\bibitem{Klainerman87}
{\sc S. Klainerman,}
Remarks on the global Sobolev inequalities in the Minkowski space $\RR^{n+1}$,
Comm. Pure Appl. Math. 40 (1987), 111--117.

\bibitem{KlainermanWangYang}
{\sc S. Klainerman, Q. Wang, and S. Yang,}
Global solution for massive Maxwell- Klein-Gordon equations, 
Comm. Pure Appl. Math. 73 (2020), 63--109. 

\bibitem{PLF-YM-book} 
{\sc P.G. LeFloch and Y. Ma,}
{\sl The hyperboloidal foliation method for nonlinear wave equations,}  
World Scientific Press, Singapore, 2014.

\bibitem{PLF-YM-CRAS}
{\sc P.G. LeFloch and Y. Ma},
The global nonlinear stability of Minkowski spacetime for the Einstein equations in presence of massive fields, 
Note C.R. Acad. Sc. Paris 354 (2016), 948--953.

\bibitem{PLF-YM-one}
{\sc P.G. LeFloch and Y. Ma},
The global nonlinear stability of Minkowski space for self-gravitating massive fields. The wave-Klein-Gordon model,
Comm. Math. Phys. 346 (2016), 603--665.

\bibitem{PLF-YM-two}
{\sc P.G. LeFloch and Y. Ma},
{\sl The global nonlinear stability of Minkowski space for self-gravitating massive fields,} 
World Scientific Press, 2018. 

\bibitem{PLF-YM-SecondPart}
{\sc P.G. LeFloch and Y. Ma},  
Einstein-Klein-Gordon spacetimes in the harmonic near-Minkowski regime,
{Port. Math.} 79 (2022), 343--393.  

\bibitem{PLF-YM-companion} 
{\sc P.G. LeFloch and Y. Ma}, 
Nonlinear stability of self-gravitating massive fields. 
A wave-Klein-Gordon model, 
{Class. Quantum Grav.} 40 (2023), 154001.  

\bibitem{PLF-YM-PDE} 
{\sc P.G. LeFloch and Y. Ma}, 
Nonlinear stability of self-gravitating massive fields, 
{Annals of PDE} 10 (2024), 16. 

\bibitem{Li-Zang-2021}
{\sc J. Li and Y. Zang,}
A vector field method for some nonlinear Dirac models in Minkowski spacetime,
{J. Differ. Equations} 273 (2021), 58-82.

\bibitem{Lindblad-3}
{\sc H. Lindblad,} 
On the asymptotic behavior of solutions to the Einstein vacuum equations in wave coordinates,
Comm. Math. Phys. 353 (2017), 135--184.

\bibitem{LR1} {\sc H. Lindblad and I. Rodnianski,}
Global existence for the Einstein vacuum equations in wave coordinates,
Comm. Math. Phys. 256 (2005), 43--110.

\bibitem{LR2} 
{\sc H. Lindblad and I. Rodnianski,}
The global stability of Minkowski spacetime in wave coordinate,
Ann. of Math. 171 (2010), 1401--1477.

\bibitem{LTay} 
{\sc H. Lindblad and M. Taylor,}  
Global stability of Minkowski space for the Einstein--Vlasov system in wave coordinates, 
Preprint ArXiv:1707.06079. 

\bibitem{YM-twoD} 
{\sc Y. Ma,} 
Global solutions of nonlinear wave-Klein-Gordon system in two spatial dimensions: a prototype of strong coupling case, 
J. Differ. Equations 287 (2021) 236--294.

\bibitem{Ma-Zhang-2022}
{\sc S. Ma and L. Zhang,}
Sharp decay estimates for massless Dirac fields on a Schwarzschild background,
{J. Funct. Anal.} 282 (2022), 6.

\bibitem{Ortin}
{\sc T. Ortin,}
{\sl Gravity and strings,}
Cambridge Monographs on Math. Physics, Cambridge Univ. Press, 2015, Second Ed.

\bibitem{Roken-2019}
{\sc C. R\"oken,}
The massive Dirac equation in Kerr geometry: separability in Eddington-Finkelstein-type coordinates and asymptotics, 
{Gen. Relativ. Gravit.} 49 (2017), 39.

\bibitem{Shen}
{\sc D. Shen,}
Global stability of Minkowski spacetime with minimal decay,
Preprint ArXiv:2310.07483. 

\bibitem{Tataru96} 
{\sc D. Tataru,} 
Strichartz estimates in the hyperbolic space and global existence for the semi-linear wave equation,
Trans. Amer. Math. Soc. 353 (2001), 795--807.  

\bibitem{Wang} 
{\sc Q. Wang}, 
An intrinsic hyperboloid approach for Einstein Klein-Gordon equations,  
J. Differential Geom. 115 (2020), 27--109.  

\bibitem{X.Wang-2015}
{\sc X. Wang,} 
On global existence of $3D$ charge critical Dirac-Klein-Gordon system, 
Internat. Math. Res. Notices 21 (2015) 10801--10846.

\bibitem{Wang-Zhang-2018}
{\sc Y. Wang and X. Zhang,}
Nonexistence of time-periodic solutions of the Dirac equation in non-extreme Kerr-Newman-AdS spacetime,
Sci. China Math. 61 (2018), 73--82 .

\bibitem{Q.Zhang-2023}
{\sc Q. Zhang,} 
Global existence for the quadratic Dirac equation in two and three space dimensions,
J. Differential Equations 344 (2023), 696--734.

\bibitem{Zhang-Zhang-2024}
{\sc H. Zhang and X. Zhang,}
Nonexistence of Majorana fermions in Kerr-Newman type spacetimes with nontrivial charge,
Chinese Phys. C 48 (2024), 115104.

\bibitem{Zhao-Wu-2025}
{\sc P. Zhao and X. Wu,}
On the local existence for the characteristic initial value problem for the Einstein-Dirac system,
Preprint ArXiv:2509.04167.


 







\end{thebibliography}

\addcontentsline{toc}{section}{References}
\pagestyle{plain}

\end{document}